\newcommand{\zdd}{a_{12}}
\newcommand{\zcd}{a_{c2}}
\newcommand{\zdc}{a_{1c}}
\newcommand{\zcc}{a_{cc}}
\newtheorem{prop}{Proposition}[section]
\newtheorem{defi}{Definition}[section]
\newtheorem{lemm}{Lemma}[section]
\newtheorem{thm}{Theorem}[section]
\newtheorem{coro}{Corollary}[section]
\newtheorem{ex}{Example}[section]
\newcommand{\bbox}{\normalsize {}%
        \nolinebreak \hfill $\blacksquare$ \medbreak \par}
\newenvironment{proof}{\noindent\emph{Proof} ---}{\bbox\vspace{0,15cm}}
\newlength{\aaa}
\title{Renormalization of quantum field theory on curved space-times, a causal approach.}
\author{Nguyen Viet Dang}
\begin{document}
\maketitle

% Dedicace et epigraphe
%\include{dedicace}

% Remerciements
%\include{remerciements}
\pagenumbering{roman}
Tout d'abord, je tiens à remercier mon directeur de thèse 
Fr\'ed\'eric H\'elein sans qui cette thèse 
n'existerait pas, 
pour 
sa grande gentillesse, 
sa disponibilité constante 
alors qu'il est assailli 
par les 
t\^aches 
administratives,
et
sa grande 
pédagogie. 
En effet, 
je suis toujours ébloui 
par 
sa capacité à rendre les concepts 
de géométrie et 
d'analyse très difficiles 
complètement limpides. 
Je le remercie aussi 
pour son insistance 
à comprendre
les chose en profondeur, 
grâce à ça, 
la récurrence 
centrale de cette thèse 
en a bien profité.
Enfin je voudrais souligner le fait 
que Fr\'ed\'eric 
fait preuve d'un  
optimisme indéfectible 
et 
d'une grande 
joie de vivre mathématique  
puisqu'il a toujours cru que 
j'avais 
les bonnes idées et que mes fautes 
pouvaient être rattrapées 
moyennant du travail et de la technique.

Ensuite, je tiens à remercier Christian 
Brouder, 
le deuxième père spirituel de la thèse, 
pour avoir cru en moins dès le jour 
où j'ai parlé d'un article 
complètement visionnaire de Borcherds, 
et pour avoir porté ce projet 
d'éclaicir 
l'article de Borcherds 
jusqu'au bout, alors que moi 
même 
je n'y croyais plus. 
Le nombre d'heures que nous avons 
passé à faire des maths,
éplucher des articles,
faire des calculs,
est énorme.
C'est lui qui m'a initié véritablement 
à cette magnifique approche causale de
la théorie des champs.
J'aimerais aussi remercier les membres
de l'IMPMC pour avoir rendu agréable 
mes passages fréquents à Jussieu
pour travailler avec Christian Brouder.

S'il y a un autre physicien que je tiens à
remercier, c'est Bertrand Delamotte
qui m'a appris les bases de la renormalisation
perturbative et aussi non perturbative à la Wilson
et qui est l'un des seuls à expliquer la théorie
quantique des champs de façon compréhensible
et exploitable par un pauvre matheux comme moi.

Je voudrais remercier les rapporteurs Christian Gérard 
et Klaus Fredenhagen pour l'énorme travail qu'ils ont abattu,
c'est quand même pas de la tarte 
de se coltiner les 200 et quelques
pages de cette thèse, et d'avoir
accepté d'être présents le
jour de la soutenance.
Je remercie également Sylvie Paycha
pour m'avoir invité à
participer à une conférence passionnante 
à Potsdam, de s'intéresser
à mon travail, de me proposer 
une collaboration qui s'annonce 
passionante
et enfin d'avoir accepté d'être dans mon jury de thèse.

Je tiens à remercier Louis Boutet de Monvel de
m'avoir appris tant de choses en M2, lorsque je suivais
son cours, qui se sont révélées cruciales
pour la thèse: 
analyse microlocale, 
opérateurs pseudodifférentiels,
techniques de la théorie de l'index,
l'importance énorme 
de la géométrie symplectique
en analyse et bien d'autres conseils
qu'il m'a donné lorsque je passais 
lui poser des questions. 
C'est lui qui est 
à l'origine 
du Chapitre 9 de la thèse
où on fait des développements asymptotiques de 
distributions et qu'on 
régularise méromorphiquement.

Je voudrais aussi remercier mes autres maîtres en mathématiques:
Alain Chenciner pour m'avoir expliqué que les maths, 
c'est dur pour tout le monde,
m'avoir appris à dessiner et penser géométrique (j'ai encore en mémoire ce fabuleux
cours de mécanique avec ces fameux tores lagrangiens...), aussi Daniel
Bennequin pour m'avoir initié à la géométrie symplectique, qui
m'a suggéré d'étudier les équations
Fuchsiennes que l'on retrouve au Chapitre 9 et qui a conjecturé l'un 
des 
résultats 
les plus original 
de la thèse: 
le front d'onde des amplitudes de Feynman 
est Lagrangien. Je lui dédie 
le Chapitre 7 de ma thèse
qui confirme cette intuition.

Je remercie tous les membres
de l'équipe géométrie et dynamique, membres
du GDR renorm, chercheurs, 
avec lesquels 
j'ai pu avoir des discussions intéressantes:
Alessandra Frabetti, Frédéric Paugam, Sergei Barannikov,
Paul Laurain, Thierry de Pauw, Mathieu Stiénon, Ping Xu,
Frédéric Patras, Frédéric Menous, Fabien Vignes-Tourneret,
Dominique Manchon, Volodya Roubtsov, Marc Bellon, 
Susama Argawala, Mathieu Lewin, Miguel Bermudez, 
Thierry de Pauw
et aussi les camarades et copains, copines 
avec qui j'ai passé des moments rigolos:
Pierre, Kodjo, Marie, Guillaume, Yohann, 
Hélène, Christophe, Elise, Jean Maxime, 
Elsa, Jérémy, Vincent, Violette, Romain, 
Olivier, 
va cac ban Viet:
Bang, Trung, Van, Linh, Linh, Huyen, Trang, 
Truong, anh Hung, anh Nam.

Je remercie les professeurs, instituteurs
que j'ai eu dans ma scolarité:
Monique Viot, Monsieur Griffaut,
Nicolas Choquet, 
Gilles Alozy,
Michel Wirth.

Je voudrais remercier mes parents, 
surtout
ma mère pour le sacrifice et la lutte permanente 
qu'elle mène depuis 
son départ du Vietnam parmi les boat people
jusqu'à aujourd'hui, 
pour m'avoir nourri, élevé, 
appris à parler le Vietnamien, 
pour me pousser 
à donner le meilleur de moi
même.
Je remercie mon oncle
pour 
s'être occupé de moi 
quand
j'étais petit,
pour m'avoir fait
comprendre la chance que 
j'ai d'être né en France et
pour m'avoir appris
à être drôle alors que son métier 
n'est pas toujours marrant, puis
merci à ma tante Mo Dao 
pour son humour qui déchire
et à mon cousin Vo An
qui est la seule personne au monde
à comprendre mon goût prononcé
pour le cinéma d'Art et d'Essai
(Kung Fu Panda, Transformers 123, 
Pacific Rim,
Expendables 12).  
Merci à Bac et Thi pour avoir 
toujours été des exemples
pour moi alors que c'est moi l'a\^iné,
pour tous les bons moments qu'on a passé 
ensemble à mater des séries, téléréalité, 
à faire les malins et les rigolos.

 Enfin à Elise pour tous les bons moments
passés ensembles et avec
sa merveilleuse famille  
et surtout à Tho (cop cai) 
que j'aime, qui cuisine tellement 
bien, 
qui me supporte dans 
les 
p\'eriodes
difficiles où
je perd les pédales,
que j'adore taquiner 
et 
qui donne un sens à
ma vie... anh iu em.

\tableofcontents
\cleardoublepage
\pagebreak
\chapter*{Introduction.}

In this thesis, we study
and solve the problem
of the renormalization 
of 
a perturbative quantum field 
theory
of interacting scalar
fields 
on curved space times
following the causal approach.

Quantum field theory is one of the
greatest and most successfull achievements of modern physics, 
since its numerical predictions
are probed by experiments with incredible accuracy.
Furthermore, QFT 
can be applied to many fields 
ranging from condensed matter theory, 
solid state physics to particle physics.
One of the greatest challenges for
modern mathematical physics is to unify 
quantum field theory and Einstein's 
general relativity.
This program seems
today
out of reach,
however
we can
address
the more recent
question
to first try 
to 
\textbf{define and construct}
quantum field theory 
on curved Lorentzian space times.
This problem was solved 
in the groundbreaking work
of Brunetti and Fredenhagen 
\cite{BF} in 2000.  

Their work was
motivated by the
observation that
both the conventional 
axiomatic approach to quantum field theory 
following Wightman's axioms or the usual textbook 
approach in momentum space failed to be generalized 
to curved space-times for several
obvious reasons:\\
- there is no Fourier transform
on curved space time\\
- the space time is no longer Lorentz invariant.\\
Indeed, 
the starting point of
the work \cite{BF}
was to follow 
one of the very 
first
approach
to QFT 
due to
Stueckelberg, 
which is based 
on the concept of causality.

The ideas of Stueckelberg 
were first 
understood
and developed 
by Bogoliubov (\cite{Bogoliubov}) 
and then 
by Epstein-Glaser (\cite{Epstein}, \cite{EGS}) 
(on flat space time).
In these approaches, one works
directly
in spacetime 
and the renormalization 
is formulated as a problem
of extension of distributions.
Somehow, this point of view based 
on the S-matrix formulation of QFT 
was almost
completely forgotten
by people working on QFT 
at the exception of
few people as e.g. Stora,
Kay, Wald
who made important contributions
to the topic (\cite{Popineau},\cite{Stora02}).
However, in 1996, 
a student of Wightman,
M.\,Radzikowsky 
revived the subject. 
In his thesis, 
he used microlocal analysis
for the first time 
in this context
and introduced the concept
of 
\emph{microlocal spectrum condition},
a condition on the wave front set
of the distributional
two-point function
which represents the quantum states,
which characterizes the 
quantum states of positive energy 
(named Hadamard states)
on curved space times.
In 2000, in a breakthrough paper, 
Brunetti and Fredenhagen
were able to generalize 
the Epstein-Glaser theory
on curved space times 
by relying on the fundamental contribution of 
Radzikowski. 
These results were further extended by 
Fredenhagen, Brunetti,
Hollands, Wald, Rejzner, etc. 
to Yang-Mills fields and the gravitation.
 
Let us first explain what do we mean by ``a quantum field
theory''.
\paragraph{The input data of a quantum field theory.}
Our data are a 
smooth globally hyperbolic 
oriented and time oriented manifold
$(M,g)$ and an algebra bundle $\underline{H}$
(called bundle of local fields) over $M$. 
Smooth sections of $\underline{H}$
represent polynomials of the scalar fields
with coefficients in $C^\infty(M)$.
$\underline{H}$ has in fact the structure of a Hopf algebra bundle, 
i.e. a vector bundle the fibers of which are Hopf algebras.
The natural causality structure on $M$ 
induces a 
natural partial order relation 
for elements of $M$: $x\leqslant y$
if $y$ lives in the causal future of $x$.
The metric $g$ gives a natural d'Alembertian operator 
$\square$ and we choose some distribution 
$\Delta_+\in \mathcal{D}^\prime(M^2)$ 
in such a way that:
\begin{itemize}
\item the distribution $\Delta_+$
is a bisolution
of $\square$,
\item the wave front set
and the singularity of 
$\Delta_+$ satisfy some 
specific constraints
(actually, $WF(\Delta_+)$ satisfies
the microlocal spectrum condition).
\end{itemize}
\paragraph{From the input data to 
modules living on configuration spaces and the 
$\star$ product.}
For each finite subset $I$ of the integers, 
we define the configuration space $M^I$ 
as the set
of maps from $I$ to $M$
figuring a cluster of points
in $M$ labelled
by indices of $I$. 
From the algebra bundle $\underline{H}$,
we construct
a natural infinite 
collection of $C^\infty(M^I)$-modules 
$(\mathcal{H}^I)_I$
(each $\mathcal{H}^I$ 
containing products
of fields at 
points labelled
by $I$) and define a collection
of subspaces $(V^I)_I$ of distributions 
on $M^I$ indexed by finite subsets $I$
of $\mathbb{N}$
(each $V^I$
contains the Feynman 
amplitudes).
The collections 
$(M^I)_I,(\mathcal{H}^I)_I,(V^I)_I$
enjoy the following simple property:
for each inclusion 
of finite sets of integers 
$I\subset J$ we have a corresponding 
projection 
$M^J\mapsto M^I$ and inclusions
$\mathcal{H}^I\hookrightarrow \mathcal{H}^J$, $V^I\hookrightarrow V^J$.
We can define a product $\star$
(``operator product of fields''), 
which to a pair of elements $A,B$
in a subset of 
$\left(\mathcal{H}^I\otimes_{C^\infty(M^I)} V^I\right)\times\left(\mathcal{H}^J\otimes_{C^\infty(M^J)}V^J\right)$ 
where $I,J$
are disjoint finite subsets of $\mathbb{N}$, assigns
an element in $\mathcal{H}^{I\cup J}\otimes_{C^\infty(M^{I\cup J})}V^{I\cup J}$.
The product
$\star$ is defined 
by some combinatorial formula 
(which translates the ``Wick theorem'' and is 
equivalent 
to a Feynman diagrammatic expansion)
which involves powers of
$\Delta_+$.
The partial order on $M$ induces
a partial order $\leqslant$ between elements
$A,B$ in $\mathcal{H}^I\times\mathcal{H}^J$ for all $I,J$.
\paragraph{The definition of a quantum field theory.}
A quantum field theory is 
a collection $T_I$ of morphisms of $C^\infty(M^I)$-modules:
$$T_I:\mathcal{H}^I\otimes_{C^\infty(M^I)} V^I\mapsto \mathcal{H}^I\otimes_{C^\infty(M^I)} V^I ,$$
which satisfies
the following axioms
\begin{enumerate}
\item $\forall \vert I\vert\leqslant 1, T_I $ is the identity map, 
\item \emph{the Wick expansion property} 
which generalizes the Wick theorem,
\item \textbf{the causality equation} which reads
$\forall A,B$ s.t. $B\nleqslant A$ 
\begin{equation}\label{caus}
T(AB)=T(A)\star T(B).
\end{equation}
\end{enumerate}
The maps $T_I$ can be interpreted
as the time ordering operation of 
Dyson.
The main problem is to find a solution 
of the equation (\ref{caus}). This solution
turns out to be non unique, actually
all solutions of this
equation
are related by the
renormalization
group of
Bogoliubov (\cite{Bogoliubov},\cite{BrouderQFT}).
\paragraph{Renormalization as the problem of making sense of the above definition.}
We denote by $d_n$ the thin diagonal in $M^n$
corresponding to $n$ points collapsing over one
point.
From the previous 
axioms, we prove that  
$T_n|_{M^n\setminus d_n}$ is a linear combination of products  
of $T_I,I\varsubsetneq \{1,\cdots,n\}$
with coefficients in $C^\infty(M^n\setminus d_n)$.
So we encounter two problems:\\
1) Since the elements $T_I$ are $\mathcal{H}$-valued distributions, we must 
justify that these products of distributions make sense in
$M^n\setminus d_n$.\\
2) Even if the product makes sense $T_n$ is still not defined over $M^n$, thus we must extend $T_n$ on $M^n$.

\paragraph{Contents of the Thesis.}
In Chapter 1, 
we address the
second of the previous questions
of defining $T_n$ on $M^n$, 
which amounts to extend 
a distribution $t$ defined on
$M\setminus I$ where $M$ is a smooth manifold and
$I$ is a closed 
embedded submanifold. 
We give a geometric definition
of scaling transversally to the submanifold $I$ 
and 
of a weak homogeneity which are completely intrinsic
(i.e. they do not depend on the choice of
local charts). 
Our definition of weak homogeneity 
follows \cite{KK} and \cite{Meyer} 
and slightly differs from the definition
of \cite{BF} which uses the Steinman scaling degree.
We prove that if a distribution 
$t$ is in $\mathcal{D}^\prime(M\setminus I)$ 
and is weakly homogeneous 
of degree $s$ then it
has an extension 
$\overline{t}\in\mathcal{D}^\prime(M)$ 
which is 
weakly homogeneous of degree $s^\prime$ for all $s^\prime <s$.
The extension sometimes 
requires 
a renormalization 
which is a subtraction of
distributions
supported on $I$ i.e.
local counterterms. 
The main difference with the work \cite{BF} 
is that 
we only have one definition
of weak homogeneity 
and we use a continuous partition of unity.  
This chapter does not rely
on microlocal analysis.

In Chapter 2, in order to solve the first problem 
of defining $T_n$,
we must explain
why the product of the
$T_I$'s in the formula
which gives $T_n$
makes sense and  
this is possible 
under some specific conditions
on the wave front sets
of the coefficients of 
the $T_I$'s.
So we are led to study
the
wave front sets
of
the extended distributions
defined
in Chapter 1.
We find a
geometric condition 
on $WF(t)$
named \emph{soft landing condition} 
which
ensures that
the wave front of the extension
is controlled.
However this geometric condition
is not sufficient and we explain 
this by a counterexample.
We also give a geometric
definition of local counterterms
associated to a distribution $t$,
which 
generalizes 
the counterterms
of QFT textbooks in
the context of 
curved space times.  
We show that the soft landing condition is equivalent to
the fact that the local counterterms
of $t$ are smooth functions
multiplied by distributions localized on the diagonal,
i.e. they have a 
specific structure 
of finitely generated
module over the ring 
$C^\infty(I)$.
The new features of this Chapter
are the soft landing
condition which does not exist in
the literature (only implicit in \cite{BF}),
the definition
of local counterterms
associated to $t$ and our
theorem 
which proves
that under certain conditions
local counterterms
are conormal distributions. 
Finally,
our counterexample
explains
why in \cite{BF}, 
the authors
impose certain
microlocal conditions
on the unextended distribution
$t$ in order to
control the wave front set
of the extension.

In chapter 3, we prove that if 
we add one supplementary boundedness 
condition on $t$ i.e. if
$t$ is weakly homogeneous in some
topological space
of distributions 
with prescribed wave front set, 
then the 
wave front $WF(\overline{t})$ 
of the extension is contained 
in the smallest possible set
which is the union of the closure of the
wave front of 
the unextended distribution $\overline{WF(t)}$
with the conormal $C$ of $I$.
Chapter 3 differs from \cite{BF} 
by the 
fact that 
we estimate 
$\overline{WF(t)}$ also in 
the case of renormalization 
with counterterms and
our proof is much more detailed.

In chapter 4, we manage to prove 
that
the conditions of Chapter 3
can be made
completely
geometric and
coordinate invariant. 
We also prove the 
boundednes
of the product
and the pull-back
operations on
distributions
in suitable microlocal topologies. 
Then we conclude Chapter 4
with the following theorem:
if $t$ is microlocally
weakly
homogeneous of degree $s\in\mathbb{R}$
then a ``microlocal extension'' $\overline{t}$
exists with minimal wave front set 
in $\overline{WF(t)}\cup C$
and $\overline{t}$ is microlocally
weakly
homogeneous of degree $s^\prime$
for all $s^\prime< s$.
Chapter 4 improves
the results of H\"ormander 
on products and pull-back of distributions
since we prove that these operators 
are bounded
maps for the 
suitable microlocal topologies. 
This seems to be a new result since in the literature
only the sequential continuity of 
products and 
pull-back are proved.

In Chapter 5, 
we construct the
two point function $\Delta_+$
which is a distributional 
solution of the wave equation
on $M$. We prove that $WF(\Delta_+)$
satisfies the microlocal
spectrum condition
of Radzikowski and finally
we establish that 
$\Delta_+$ is ``microlocally weakly
homogeneous'' of degree $-2$.
Chapter 5 contains a 
complete mathematical
justification of the 
Wick
rotation 
for which an explicit
reference is missing although
the idea of its proof 
is sketched in \cite{Taylor}.
We also explicitly 
compute the
wave front set
of the
holomorphic family
$Q^s(\cdot+i0\theta)$ which
cannot be found in \cite{Hormander},
(we only found
a computation of 
the \textbf{analytic}
wave front set
--in the sense of Sato-Kawai-Kashiwara--
of $Q^s(\cdot+i0\theta)$ 
in \cite{KKK} p.~90 example 2.4.3).
Finally, our  
proof that the
wave front set
of $\Delta_+$
(constructed as a 
perturbative 
series \`a la Hadamard)
satisfies the
microlocal spectrum condition
seems to be missing 
in the literature.
The construction appearing in \cite{GW}
is not
sufficient to
prove that
$\Delta_+$
is microlocally weakly
homogeneous of degree 
$-2$.

Chapter 6 is the final piece of this
building. 
Inspired
by the work of Borcherds,
we quickly give our definition
of a quantum field theory
using the convenient language of
Hopf algebras then we 
state the problem of defining
a quantum field theory
as equivalent
to the problem
of solving the equation
(\ref{caus}) in $T$ 
recursively in $n$
on all configuration spaces $M^n$.
We prove this recursively 
using all tools developed 
in the previous chapters,
a careful 
partition of the
configuration space
generalizing ideas
of R.\, Stora to the case
of curved space times 
and an idea of 
polarization of wave front 
sets which translates
microlocally the idea 
of positivity of energy.

Chapter 7 solves a
conjecture
of Bennequin and
gives a nice
geometric interpretation
of
the wave front set of any
Feynman amplitude:
\begin{itemize}
\item it is parametrized
by a Morse family,
\item it is a union of
smooth Lagrangian
submanifolds of the
cotangent space of configuration space.
\end{itemize}

In Chapter 8, 
which can be read independently 
of the rest
except Chapter 1, 
using the language of currents, we treat
the problem
of preservation of symmetries 
by the extension procedure.
Indeed, renormalization can break 
the symmetries 
of the
unrenormalized objects
and the fact that
renormalization does not commute
with the action
of vector fields from
some Lie algebra of symmetries 
is called anomaly
and is measured by the 
appearance of local counterterms,
which are far reaching generalizations
of the notion of residues coming
from algebraic geometry,
(but generalized 
here to the current theoretic setting).

Finally, in chapter 9 
we revisit the extension problem
from the point of view of meromorphic regularization.
We prove that under certain conditions on distributions,
they can be meromorphically regularized then the extension
consists in a subtraction of poles which are also local counterterms.
To conclude this last Chapter, 
we 
introduce a lenght scale 
$\ell$
in the meromorphic 
renormalization and 
we prove that scaling  
in $\ell$
only gives polynomial 
divergences in
$\log\ell$.

\chapter{The extension of distributions.}
\pagenumbering{arabic}
\section{Introduction.}
In the Stueckelberg  (\cite{StueckelbergP})
approach
to quantum field theory,
renormalization  
was formulated as a problem
of division
of distributions.
For Epstein--Glaser (\cite{Epstein}, \cite{EGS})
, Stora (\cite{Popineau},\cite{Stora02}), 
and implicitly in
Bogoliubov (\cite{Bogoliubov}),
it was formulated as a problem 
of extension of distributions,
the latter
approach
is more general
since the
ambiguity
of the extension
is described
by the renormalization
group.
This procedure 
was implemented 
on arbitrary manifolds 
(hence for curved Lorentzian spacetimes) 
by Brunetti and Fredenhagen 
in their groundbreaking paper of 
$2000$ \cite{BF}. 
However, in the mathematical literature, 
the problem of extension of distributions 
goes back at least to the work of 
Hadamard and Riesz on hyperbolic equations 
(\cite{Riesz},\cite{Hadamard}).
It became a central argument 
for the proof of a conjecture of Laurent Schwartz (\cite{Schwartz} p.~126,\cite{Malbour}): 
the problem was to find a fundamental solution $E$ for a linear PDE with constant coefficients in $\mathbb{R}^n$, which means solving the equation $PE=\delta$ in the distributional sense. By Fourier transform, this is equivalent to the problem of extending $\widehat{P}^{-1}$ which is a honest smooth function on $\mathbb{R}^n\setminus\{\widehat{P}=0\}$ as a distribution on $\mathbb{R}^n$,
in such a way that $\widehat{P}\widehat{P}^{-1}=1$ which makes
the division a particular case
of an extension.
This problem 
set 
by Schwartz
was solved 
positively 
by Lojasiewicz and H\"ormander (\cite{Hormander},\cite{Loja}). 
Recently, the more general 
extension problem 
was revisited in mathematics 
by Yves Meyer in his wonderful book \cite{Meyer}. 
In \cite{Meyer}, Yves Meyer 
also explored some deep relations 
between the extension problem and 
Harmonic analysis 
(Littlewood--Paley and Wavelet decomposition). 
The extension problem was solved in \cite{Meyer} on 
$\left(\mathbb{R}^n\setminus\{0\}\right)$.
For the need of quantum field theory, 
we will extend 
his method 
to manifolds. 
In order to renormalize, one should find 
some way of measuring the wildness 
of the singularities of distributions. 
Indeed, we need to impose 
some growth 
condition on distributions 
because distributions cannot be extended in general!
We estimate the 
wildness 
of the singularity 
by first defining 
an adequate notion of 
scaling with respect 
to a closed embedded
submanifold $I$ of a given manifold $M$, 
as done by Brunetti--Fredenhagen \cite{BF}.
On $\mathbb{R}^{n+d}$ viewed 
as the cartesian product 
$\mathbb{R}^n\times \mathbb{R}^d$, 
the scaling is clearly defined by homotheties 
in the variables corresponding to the second factor $\mathbb{R}^d$. 
We adapt the definition of Meyer \cite{Meyer} in these variables
and define the space of 
weakly homogeneous distributions of degree $s$ which we 
call $E_s$.
 
 We are able to represent all elements of 
$E_s$ which are
defined on $M\setminus I$ 
through a decomposition formula 
by a family 
$\left(u^\lambda\right)_{\lambda\in(0,1]}$ satisfying some specific hypothesis. 
The distributions $\left(u^\lambda\right)_{\lambda\in(0,1]}$ are the building blocks of the $E_s$
and are the key for the renormalization.
We establish
the following
correspondence
\begin{eqnarray}
\left(u^\lambda\right)_{\lambda\in(0,1]}\longmapsto \int_0^1 \frac{d\lambda}{\lambda} \lambda^s(u^\lambda)_{\lambda^{-1}} + \text{nice terms},\\
t\in E_s\longmapsto \left(u^\lambda\right)_{\lambda\in(0,1]} \text{ where } u^\lambda=\lambda^{-s}t_\lambda\psi,
\end{eqnarray}
the nice terms are distributions
supported on the complement
of $I$.

 However this scaling is only defined in local charts and a scaling around a 
submanifold $I$
in a manifold $M$ depends on the choice of an Euler 
vector field. 
Thus we propose a geometrical definition 
of a class of Euler vector fields:
to any closed embedded submanifold $I\subset M$, 
we associate the 
$\textbf{ideal }\mathcal{I}$ of smooth functions vanishing on $I$. 
A vector field $\rho$ is called Euler vector field if 
\begin{equation}
\forall f\in\mathcal{I},  \rho f-f\in\mathcal{I}^2. 
\end{equation} 
This definition is clearly intrinsic.   
We prove that all scalings are equivalent hence all spaces 
of weakly homogeneous distributions are equivalent 
and that our definitions are in fact independent 
of the choice of Euler vector fields. Actually, 
we prove that 
all Euler vector fields are locally conjugate
by a local diffeomorphism which fixes the submanifold $I$.
So it is enough to study both $E_s$ and the 
extension problem in 
a local chart. 
Meyer and Brunetti--Fredenhagen make use of a dyadic decomposition. 
We use instead a
$\textbf{continuous partition of unity}$ 
which is a continuous analog 
of the Littlewood--Paley 
decomposition.   
The continuous partition of unity 
has many advantages 
over the discrete approaches:
1) it provides a direct connection with the theory of Mellin transform, 
which allows to easily 
define meromorphic regularizations;
2) it gives elegant formulas 
especially for the poles and residues
appearing in the 
meromorphic regularization (see Chapter $7$);
3) it is well suited
to the study of anomalies (see Chapter $6$). 
\paragraph{Relationship with other work.}
In Brunetti--Fredenhagen 
\cite{BF}, 
the scaling 
around manifolds
was also defined
but they used
two different 
definitions of scalings,
then they
showed that these
actually
coincide,
whereas 
we only give one
definition
which is geometric.
In mathematics,
we also found
some interesting
work by Kashiwara--Kawai,
where
the concept 
of weak homogeneity
was also defined 
(\cite{KK} Definition $(1.1)$ p.~22).
\section{Extension and renormalization.}
\subsection{Notation, definitions.}
We work in $\mathbb{R}^{n+d}$ with coordinates $(x,h)$, $I=\mathbb{R}^n\times\{0\}$ is the linear subspace $\{h=0\}$.
For any open set $U\subset\mathbb{R}^{n+d}$, we denote by $\mathcal{D}(U)$ the space of 
test functions supported on $U$ and for all compact $K\subset U$,
we denote by $\mathcal{D}_K(U)$ the subset of all test functions in $\mathcal{D}(U)$ supported on $K$.
We also use the seminorms:
$$\forall \varphi\in\mathcal{D}(\mathbb{R}^{n+d}), \pi_k(\varphi):=\sup_{\vert\alpha\vert\leqslant k} \Vert \partial^\alpha\varphi\Vert_{L^\infty(\mathbb{R}^{n+d})},$$
$$\forall \varphi\in C^\infty(\mathbb{R}^{n+d}),\forall K\subset \mathbb{R}^d, \pi_{k,K}(\varphi):=\sup_{\vert\alpha\vert\leqslant k} \sup_{x\in K}\vert \partial^\alpha\varphi(x)\vert.$$
We denote by $\mathcal{D}^\prime(U)$ the space of distributions defined on $U$.
The duality
pairing
between
a
distribution $t$ and 
a test function $\varphi$
is denoted
by 
$\left\langle t,\varphi\right\rangle$.
For a function,
we define
$\varphi_\lambda(x,h)
=\varphi(x,\lambda h)$. 
For the
vector field
$\rho=h^j\frac{\partial}{\partial h^j}$,
the
following
formula
$$\varphi_\lambda=e^{(\log\lambda)\rho\star}\varphi,$$
shows the relation
between $\rho$
and the scaling. 
Once we have defined 
the scaling
for test functions,
for any distribution $f$, 
we define the scaled
distribution $f_\lambda$:
$$\forall\varphi\in\mathcal{D}(\mathbb{R}^{n+d}),\left\langle f_\lambda,\varphi \right\rangle=\lambda^{-d}\left\langle f,\varphi_{\lambda^{-1}} \right\rangle.$$
If $f$ were a function, this definition would coincides with the naive scaling
$f_\lambda(x,h)=f(x,\lambda h)$.

We give a definition of 
weakly homogeneous 
distributions in flat space
following \cite{Meyer}.
We call a subset $U\subset \mathbb{R}^{n+d}$ $\rho$-convex
if $(x,h)\in U \implies \forall\lambda \in(0,1], (x,\lambda h)\in U$.
We insist on the fact that since we pick $\lambda>0$, a $\rho$-convex domain
may have \emph{empty intersection} with $I$. 
\begin{defi}
Let $U$ be an arbitrary $\rho$-convex open subset of $\mathbb{R}^{n+d}$.  
$E_s(U)$ is defined as the space of distributions $t$ such that $t\in \mathcal{D}^\prime(U)$ and 
$$\forall\varphi\in \mathcal{D}(U),\exists C(\varphi), \sup_{\lambda\in(0,1]} \vert\lambda^{-s}\left\langle t_\lambda,\varphi \right\rangle \vert\leqslant C(\varphi).$$ 
\end{defi}
In the
quantum field theory
litterature, the wildness
of distributions is measured
by the Steinman
scaling degree.
We prefer the definition
of Meyer, which exploits
the properties of bounded
sets 
in the space of distributions
(this is related to bornological properties
of $\mathcal{D}^\prime(U)$).

We denote by $\frac{d\lambda}{\lambda}$ the multiplicative measure on $[0,1]$. 
We shall now give a definition of a
class of maps $\lambda\mapsto u^\lambda$ with value in
the space of distributions.
\begin{defi}
For all $1\leqslant p\leqslant \infty$, we define $L_{\frac{d\lambda}{\lambda}}^p([0,1],\mathcal{D}^\prime(U))$ as the space of families $(u^\lambda)_{\lambda\in(0,1]}$ of distributions such that
\begin{equation}
\forall\varphi\in \mathcal{D}(U), \lambda\mapsto  \left\langle u^\lambda,\varphi\right\rangle \in L_{\frac{d\lambda}{\lambda}}^p([0,1],\mathbb{C}). 
\end{equation}
\end{defi}
\paragraph{The H\"ormander trick.}
We recall here the basic idea of Littlewood--Paley analysis (\cite{Meyer} p.~14). Pick a function $\chi$ which depends only on $h$ such that $\chi=1$ when $\vert h\vert\leqslant 2$ and $\chi=0$ for $\vert h\vert\geqslant 3$.
\begin{figure} %on ouvre l'environnement figure
\begin{center}
\includegraphics[width=8cm]{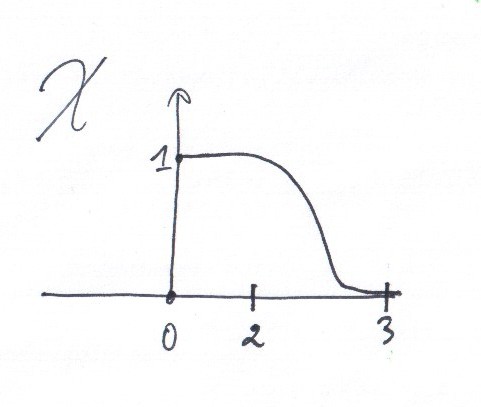} %ou image.png, .jpeg etc.
\caption{The function $\chi$ of Littlewood--Paley theory.} %la légende
%l'étiquette pour faire référence à cette image
\end{center}
\end{figure} %on ferme l'environnement figure 
The Littlewood--Paley function $\psi(\cdot)=\chi(\cdot)-\chi(2\cdot)$ is supported on the annulus $1\leqslant \vert h\vert\leqslant 3$.
Then the idea is to rewrite the plateau function $\chi$ using the trick of the telescopic series
$$\chi=  \chi(\cdot)-\chi(2\cdot)+\cdots+\chi(2^j\cdot)-\chi(2^{j+1}\cdot)+\cdots $$
and deduce a dyadic partition of unity $$1=\left(1-\chi\right) + \sum_{j=0}^\infty \psi(2^j.)$$
Our goal in this paragraph is to derive a continuous analog of the dyadic partition of unity.
% In this section, we work in a given fixed compact subset of the form $K=K_1\times \{ \vert h\vert\leqslant a\}\subset \mathbb{R}^{n+d}$, the compact set $K$ is $\rho$-convex. 
%Warning: All test functions $\chi,\varphi$ presented in this section are supported in the respective compact sets $K^\prime,K$ where $K\subset K^\prime$ because we only treat the local problem in this part.
Let $\chi\in C^\infty(\mathbb{R}^{n+d})$ such that $\chi=1$ in a neighborhood $N_1$ of $I$ and $\chi$ vanishes outside a neighborhood $N_2$ of $N_1$. This implies $\chi$ satisfies the following constraint: 
for all compact set $K\subset \mathbb{R}^{n},\exists (a,b)\in \mathbb{R}^2$ such that $b>a>0$ and $\chi|_{(K\times\mathbb{R}^d)\cap \{\vert h\vert\leqslant a\}}=1, \chi|_{(K\times\mathbb{R}^d)\cap \{\vert h\vert\geqslant b\}}=0$.
\begin{figure} %on ouvre l'environnement figure
\begin{center}
\includegraphics[width=10cm]{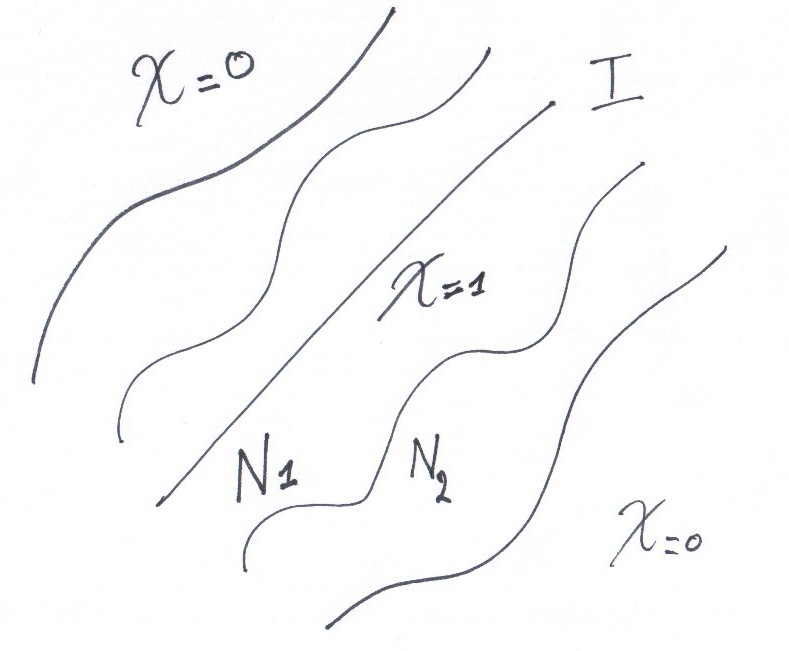} %ou image.png, .jpeg etc.
%l'étiquette pour faire référence à cette image
\end{center}
\end{figure} %on ferme l'environnement figure 
%
% We fix the compact set $K_2=\{\vert h\vert\leqslant a\}$ and $K_1$ is any closed ball in $\mathbb{R}^n$.
We find a convenient formula (inspired by \cite{Hormanderwave} equation (8.5.1) p.~200 and \cite{Meyer}
Formula (5.6) p.~28) for $\chi$ as an integral over a scale space indexed by $\lambda\in (0,1]$.
First notice that $\chi(x,\frac{h}{\lambda})\rightarrow_{\lambda\rightarrow 0} 0$ in $L^1_{loc}$. 
We repeat the Littlewood Paley trick in the continuous setting: 
$$\chi(x,h)=\chi(x,h)-0= \int_{0}^1 \frac{d\lambda}{\lambda}\lambda\frac{d}{d\lambda}\left[\chi(x,\lambda^{-1}h)\right]= \int_{0}^1\frac{d\lambda}{\lambda} \left(-\rho\chi\right)(x,\lambda^{-1}h) $$
Set 
\begin{equation}
\psi=-\rho\chi.
\end{equation}
Notice an important property of $\psi$: on each compact set $K\subset\mathbb{R}^{n}$, $\exists (a,b)\in\mathbb{R}^2$ such that $\psi|_{(K\times\mathbb{R}^d)}$ is
supported on the annulus $(K\times\mathbb{R}^d)\cap \{a\leqslant \vert h\vert\leqslant b \}$.
\begin{figure} %on ouvre l'environnement figure
\begin{center}
\includegraphics[width=10cm]{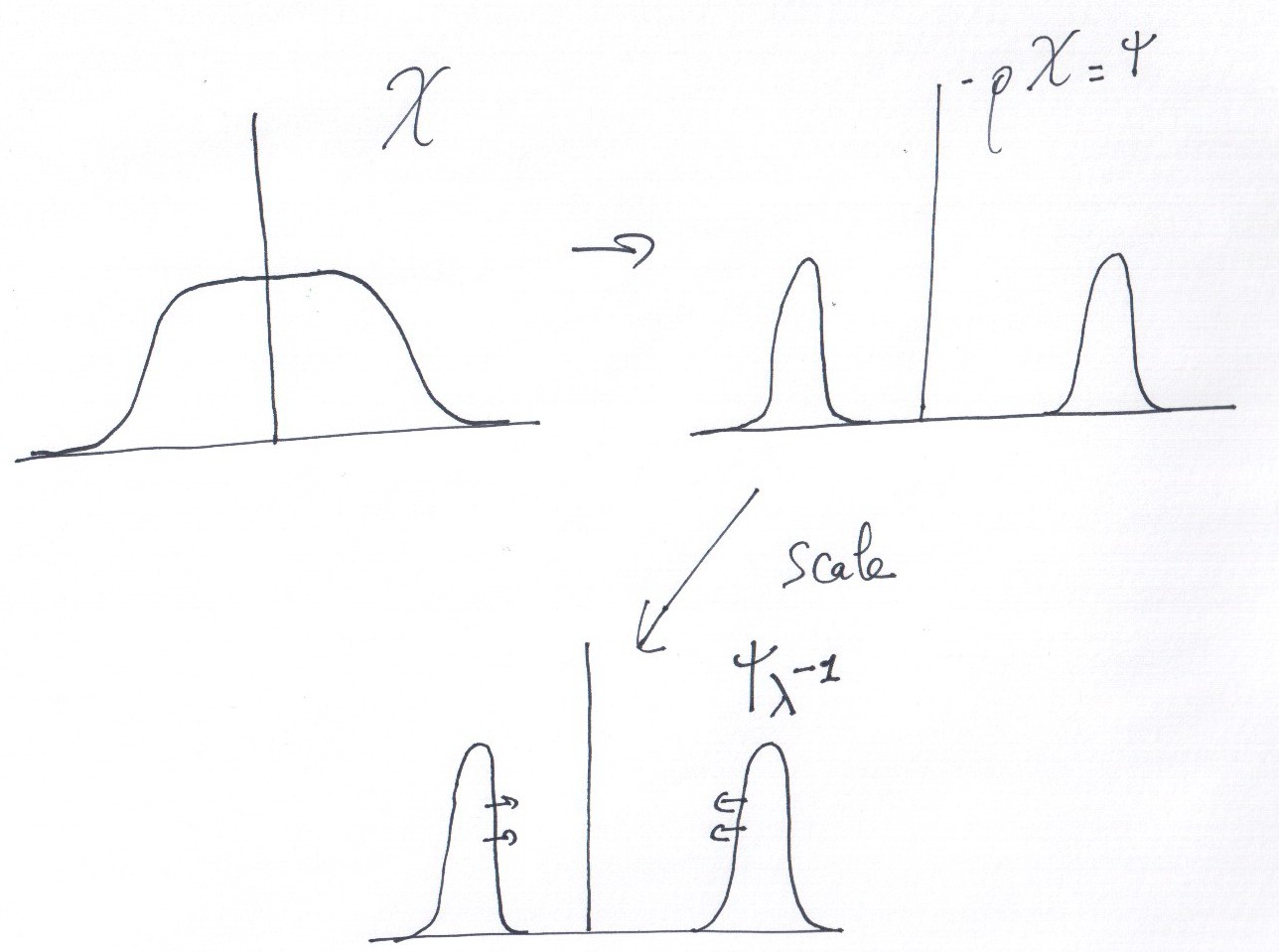} %ou image.png, .jpeg etc.
\caption{The function $\chi$, the function $\psi$ and the scaled $\psi_{\lambda^{-1}}$.} %la légende
%l'étiquette pour faire référence à cette image
\end{center}
\end{figure} %on ferme l'environnement figure 
We obtain the formula
\begin{equation}
1=(1-\chi)+\int_{0}^1\frac{d\lambda}{\lambda} \psi_{\lambda^{-1}},  
\end{equation}
which for the moment only has a heuristic meaning.
\begin{figure} %on ouvre l'environnement figure
\begin{center}
\includegraphics[width=8cm]{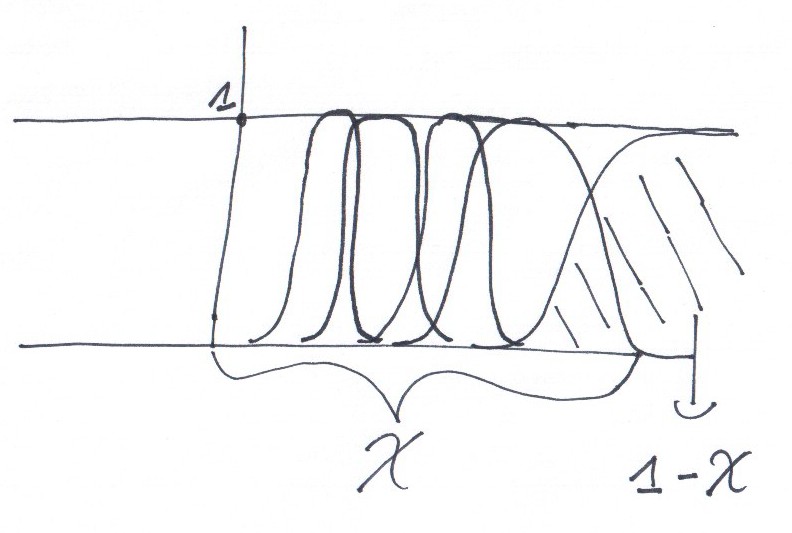} %ou image.png, .jpeg etc.
\caption{Partition of unity.} %la légende
%l'étiquette pour faire référence à cette image
\end{center}
\end{figure} %on ferme l'environnement figure 
The next proposition gives a precise meaning to the heuristic formula and gives a candidate formula for the extension problem.
\begin{prop}
Let $\chi\in C^\infty(\mathbb{R}^{n+d})$ such that $\chi=1$ in a neighborhood $N_1$ of $I$ and $\chi$ vanishes outside a neighborhood $N_2$ of $N_1$ and let $\psi=-\rho\chi$. Then for all
$\varphi\in \mathcal{D}(\mathbb{R}^{n+d})$ such that $\varphi=0$ in a neighborhood of $I=\{h=0\}$, 
we find
\begin{equation}
\left\langle t,\varphi\right\rangle =\int_{0}^1\frac{d\lambda}{\lambda} \left\langle t \psi_{\lambda^{-1}} ,\varphi\right\rangle + \left\langle t,(1-\chi)\varphi\right\rangle. 
\end{equation}
\end{prop}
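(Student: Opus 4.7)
My plan is to first establish the pointwise scalar identity
$$\chi(x,h) \;=\; \int_0^1 \frac{d\lambda}{\lambda}\,\psi_{\lambda^{-1}}(x,h) \qquad\text{valid for every }h\neq 0,$$
then multiply by $\varphi$ and commute the $\lambda$-integral with the pairing against $t$. The pointwise identity follows from the fundamental theorem of calculus applied to $\lambda\mapsto\chi(x,\lambda^{-1}h)$: a direct computation gives
$$\frac{d}{d\lambda}\chi(x,\lambda^{-1}h) \;=\; -\lambda^{-1}(\rho\chi)(x,\lambda^{-1}h) \;=\; \lambda^{-1}\psi_{\lambda^{-1}}(x,h),$$
and integrating from $0$ to $1$ yields $\chi(x,h) - \lim_{\lambda\to 0^+}\chi(x,\lambda^{-1}h)$. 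The boundary term vanishes whenever $h\neq 0$, because on any compact $x$-set the function $\chi(x,\cdot)$ has support in $\{|h'|\leqslant b\}$, so $\chi(x,\lambda^{-1}h) = 0$ once $\lambda < |h|/b$.

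Next, since $\varphi$ vanishes in a neighborhood of $I$, I can pick $\varepsilon>0$ with $\supp\varphi\subset\{|h|\geqslant\varepsilon\}$. Taking a compact $K\subset\mathbb{R}^n$ that contains the $x$-projection of $\supp\varphi$ and choosing $0<a<b$ so that $\psi|_{K\times\mathbb{R}^d}$ is supported in the annulus $\{a\leqslant|h|\leqslant b\}$, the product $\psi_{\lambda^{-1}}(x,h)\varphi(x,h)$ is supported in $\{a\lambda\leqslant|h|\leqslant b\lambda\}\cap\{|h|\geqslant\varepsilon\}$, which is empty for $\lambda<\varepsilon/b$. The effective range of integration thus collapses to the compact interval $[\varepsilon/b,1]$, and on it the map $\lambda\mapsto\psi_{\lambda^{-1}}\varphi$ is smooth with values in $\mathcal{D}_{K'}(\mathbb{R}^{n+d})$ for a single compact set $K'\subset\mathbb{R}^{n+d}$.

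It remains to justify pushing $t$ inside the integral. Each seminorm $\pi_k(\psi_{\lambda^{-1}}\varphi)$ depends continuously on $\lambda\in[\varepsilon/b,1]$ and hence is bounded there, so the Bochner-type integral $\int_0^1 \frac{d\lambda}{\lambda}\,\psi_{\lambda^{-1}}\varphi$ converges in $\mathcal{D}_{K'}(\mathbb{R}^{n+d})$; by the pointwise identity it equals $\chi\varphi$. Continuity of $t$ on $\mathcal{D}_{K'}$ then gives, via a standard Riemann-sum approximation,
$$\int_0^1\frac{d\lambda}{\lambda}\,\langle t\psi_{\lambda^{-1}},\varphi\rangle \;=\; \left\langle t,\int_0^1\frac{d\lambda}{\lambda}\,\psi_{\lambda^{-1}}\varphi\right\rangle \;=\; \langle t,\chi\varphi\rangle,$$
and adding $\langle t,(1-\chi)\varphi\rangle$ recovers $\langle t,\varphi\rangle$. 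The main (though mild) obstacle is precisely this commutation step: it works only because the hypothesis that $\varphi$ vanishes near $I$ combines with the annular fiber-support of $\psi$ to confine the family $\{\psi_{\lambda^{-1}}\varphi:\lambda\in(0,1]\}$ to a fixed $\mathcal{D}_{K'}$ and to truncate the $\lambda$-integral away from $0$; without these two properties the $\mathcal{D}'$-valued integral would fail to converge, which is exactly why the formula yields only the extension off $I$ and why the subsequent sections must develop a genuine renormalization procedure.
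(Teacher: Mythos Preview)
Your proof is correct and follows essentially the same line as the paper's: both identify that the hypothesis $\varphi=0$ near $I$, combined with the annular support of $\psi$, truncates the $\lambda$-integral to a compact interval $[\varepsilon/b,1]$, so that $\chi\varphi=\int_{\varepsilon/b}^{1}\tfrac{d\lambda}{\lambda}\,\psi_{\lambda^{-1}}\varphi$ as an identity of test functions and the pairing with $t$ can be safely exchanged with the integral. Your treatment of the commutation step (Bochner integral in a fixed $\mathcal{D}_{K'}$) is slightly more explicit than the paper's, but the argument is the same.
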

The formula 
$t=\int_{0}^1\frac{d\lambda}{\lambda} \left\langle t \psi_{\lambda^{-1}} ,\varphi\right\rangle + \left\langle t,(1-\chi)\varphi\right\rangle$
was inspired 
by Formula
(5.8), (5.9)
in \cite{Meyer}
p.~28.

\begin{proof}
Let $\delta>0$ such that $\varphi=0$ when $\vert h\vert\leqslant \delta$.
We can find $0<a<b$ such that
$\left[\vert h\vert>b\implies \chi=0\right]$ and $\left[\vert h\vert>b\implies-\rho\chi=\psi=0\right]$.
Hence $ \text{supp }\psi(x,\frac{h}{\lambda})\subset \{\vert h\vert\leqslant \lambda b\} $
which implies $\forall\lambda \leqslant \frac{\delta}{b}, \varphi(x,h)\psi(x,\frac{h}{\lambda}) =0$.
We have the relation $\varphi=\varphi(1-\chi)+\varphi\chi=\int_{\frac{\delta}{b}}^1 \frac{d\lambda}{\lambda}\psi_{\lambda^{-1}}\varphi + \varphi(1-\chi)$ where the integral is well defined, we thus
deduce $\forall \varepsilon\in[0,\frac{\delta}{b}]$
$$\varphi\chi= \int_\varepsilon^1\frac{d\lambda}{\lambda}  \underset{=0 \text{ for }\lambda\in[\varepsilon,\frac{\delta}{b}]}{\underbrace {\psi_{\lambda^{-1}}\varphi}}=\int_{\frac{\delta}{b}}^1 \frac{d\lambda}{\lambda}\psi_{\lambda^{-1}}\varphi  $$
where the product makes perfect sense as a product of smooth functions, hence
$$\left\langle t\chi,\varphi\right\rangle =\left\langle t,\chi\varphi\right\rangle=\left\langle t,\int_\varepsilon^1  \frac{d\lambda}{\lambda}\psi_{\lambda^{-1}}\varphi\right\rangle= \int_{\varepsilon}^1\frac{d\lambda}{\lambda} \left\langle t\psi(\frac{h}{\lambda}) ,\varphi\right\rangle$$ 
$$=\int_{\frac{\delta}{b} }^1\frac{d\lambda}{\lambda} \left\langle t\psi(\frac{h}{\lambda}) ,\varphi\right\rangle=\int_{0 }^1\frac{d\lambda}{\lambda} \left\langle t\psi(\frac{h}{\lambda}) ,\varphi\right\rangle $$ where we can safely interchange the integral 
and the duality bracket.
\end{proof}

\paragraph{Another interpretation of the H\"ormander formula.}
The H\"ormander formula gives a convenient way to write $\chi-\chi_{\varepsilon^{-1}}$.
$$\chi-\chi_{\varepsilon^{-1}}=\int_\varepsilon^1 \frac{d\lambda}{\lambda} \psi_{\lambda^{-1}} $$
then noticing that when $\varepsilon>0$, for all $\lambda\in[\varepsilon,1]$, $\psi_{\lambda^{-1}}$ is supported 
on the complement of a neighborhood of $I$, this implies that for all test functions $\varphi\in\mathcal{D}(\mathbb{R}^{n+d})$, for all $\varepsilon>0$,
we have the nice identity:
$$\int_\varepsilon^1 \frac{d\lambda}{\lambda} \left\langle t\psi_{\lambda^{-1}},\varphi\right\rangle =\left\langle t\left(\chi-\chi_{\varepsilon^{-1}}\right),\varphi\right\rangle .$$
\begin{figure} %on ouvre l'environnement figure
\begin{center}
\includegraphics[width=8cm]{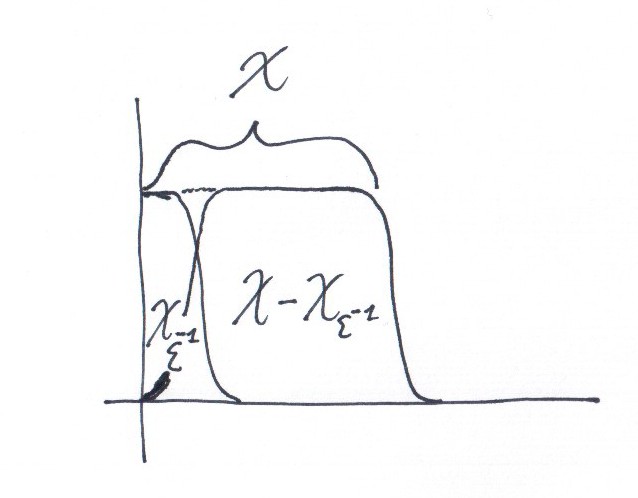} %ou image.png, .jpeg etc.
\caption{$\chi-\chi_{\varepsilon^{-1}}$.} %la légende
%l'étiquette pour faire référence à cette image
\end{center}
\end{figure} %on ferme l'environnement figure 
Now if the function $\left\langle t\psi_{\lambda^{-1}},\varphi\right\rangle $ is integrable on $[0,1]$ w.r.t. the measure $\frac{d\lambda}{\lambda}$, 
the \textbf{existence} of the integral $\int_0^1 \frac{d\lambda}{\lambda} \left\langle t\psi_{\lambda^{-1}},\varphi\right\rangle $ will \textbf{imply} that the \textbf{limit}
\begin{equation}\label{limitformula}
\lim_{\varepsilon\rightarrow 0} \left\langle t\left(\chi-\chi_{\varepsilon^{-1}}\right),\varphi\right\rangle 
\end{equation}
exists. In the next sections, 
we prove that 
when the distribution $t$ is in $E_s$ for $s+d>0$, the integral formula $\int_\varepsilon^1 \frac{d\lambda}{\lambda} \left\langle t\psi_{\lambda^{-1}},\varphi\right\rangle $ converges when $\varepsilon\rightarrow 0$. 
Thus the limit (\ref{limitformula}) exists.
However, when $t\in E_s$ when $s+d<0$, 
we must modify the formula $\int_\varepsilon^1 \frac{d\lambda}{\lambda} \left\langle t\psi_{\lambda^{-1}},\varphi\right\rangle $, which is divergent when $\varepsilon\rightarrow 0$, by subtracting a \textbf{local counterterm}
$\left\langle c_\varepsilon,\varphi \right\rangle$ where $(c_\varepsilon)_\varepsilon$ is a family of distribution \textbf{supported} on $I$ such that
the limit
\begin{equation}\label{renormlimitformula}
\lim_{\varepsilon\rightarrow 0} \left(\left\langle t\left(\chi-\chi_{\varepsilon^{-1}}\right),\varphi\right\rangle-\left\langle c_\varepsilon,\varphi \right\rangle \right),
\end{equation}
makes sense.
Notice that the renormalization does not affect the original distribution $t$ on $M\setminus I$
since $c_\varepsilon$ is supported on $I$. 
\subsection{From bounded families to weakly homogeneous distributions.}

We construct an algorithm 
which starts from an arbitrary
family
of bounded
distributions $(u^\lambda)_{\lambda\in(0,1]}$ 
supported 
on some annular domain,
and builds a weakly homogeneous
distribution of degree $s$.
Actually, any distribution
which is weakly
homogeneous of degree $s$
can be reconstructed from our algorithm
as we will see in the next section.
This is the key remark
which allows us 
to solve the problem
of extension
of distributions.
In this part, we make essential use of 
the Banach Steinhaus theorem 
on the dual of a Fr\'echet space 
recalled in appendix.    
We use the notation $t_\lambda(x,h)=t(x,\lambda h)$ and $U$ is a $\rho$-convex open subset in $\mathbb{R}^{n+d}$.
\begin{defi}
A family of distributions $(u^\lambda)_{\lambda\in(0,1]}$ is called uniformly supported on an annulus domain
of $U$ if
for all compact set $K\subset \mathbb{R}^n$, there exists $0<a<b$ such that $\forall\lambda, u^\lambda|_{(K\times \mathbb{R}^d)\cap U}$ is supported in a fixed annulus $\{(x,h)|x\in K, a\leqslant\vert h\vert\leqslant b\}\cap U$.
\end{defi}
The structure theorem gives us an algorithm to construct distributions in $E_s(U)$ 
given any family of distributions $(u^\lambda)_{\lambda\in (0,1]}$ bounded in $\mathcal{D}^\prime(U\setminus I)$ and uniformly supported on an annulus domain of $U$. 

\begin{lemm}
Let $(u^\lambda)_{\lambda\in(0,1]}$ be a bounded family in $\mathcal{D}^\prime(U\setminus I)$
which is uniformly supported on an annulus domain
of $U$. 
Then the family
$\left(\lambda^{-d}u^\lambda_{\lambda^{-1}}\right)_{\lambda\in(0,1]}$ is bounded in $\mathcal{D}^\prime(U)$.
\end{lemm}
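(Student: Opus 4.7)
The plan is to test the claimed boundedness against an arbitrary fixed test function $\varphi\in\mathcal{D}(U)$ and reduce it, via the scaling identity, to the given boundedness of $(u^\lambda)$ on $U\setminus I$. Using the definition $\langle f_\mu,\varphi\rangle=\mu^{-d}\langle f,\varphi_{\mu^{-1}}\rangle$ with $\mu=\lambda^{-1}$, one immediately computes
\[
\langle \lambda^{-d}u^\lambda_{\lambda^{-1}},\varphi\rangle=\langle u^\lambda,\varphi_\lambda\rangle,
\]
so everything reduces to showing that $\lambda\mapsto\langle u^\lambda,\varphi_\lambda\rangle$ is bounded on $(0,1]$, and then invoking the Banach--Steinhaus theorem (available on the Fréchet space of test functions, as recalled in the appendix) to conclude boundedness of the full family in $\mathcal{D}^\prime(U)$.

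First I would localize. Let $K\subset\mathbb{R}^n$ and $R>0$ be chosen so that $\supp\varphi\subset K\times\{|h|\leqslant R\}$. By the uniform annulus hypothesis, pick $0<a<b$ such that, for every $\lambda\in(0,1]$, the restriction of $u^\lambda$ to $(K\times\mathbb{R}^d)\cap U$ is supported in the compact set $K^\prime:=\{(x,h)\in U\telque x\in K,\ a\leqslant|h|\leqslant b\}$. Since $a>0$, $K^\prime\subset U\setminus I$. Choose a cutoff $\chi\in\mathcal{D}(U\setminus I)$ equal to $1$ on $K^\prime$ and supported in a slightly larger compact $K^{\prime\prime}\subset U\setminus I$; then
\[
\langle u^\lambda,\varphi_\lambda\rangle=\langle u^\lambda,\chi\varphi_\lambda\rangle
\]
for every $\lambda\in(0,1]$, and each $\chi\varphi_\lambda$ lies in $\mathcal{D}_{K^{\prime\prime}}(U\setminus I)$.

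Next I would apply Banach--Steinhaus to the bounded family $(u^\lambda)_{\lambda\in(0,1]}$ in $\mathcal{D}^\prime(U\setminus I)$: there exist an integer $k$ and a constant $C>0$, depending only on $K^{\prime\prime}$, such that
\[
|\langle u^\lambda,\psi\rangle|\leqslant C\,\pi_k(\psi)\qquad\text{for all }\psi\in\mathcal{D}_{K^{\prime\prime}}(U\setminus I)\text{ and all }\lambda\in(0,1].
\]
It remains to bound $\pi_k(\chi\varphi_\lambda)$ uniformly in $\lambda$. By the Leibniz rule and the identity $\partial_x^\alpha\partial_h^\beta\varphi_\lambda(x,h)=\lambda^{|\beta|}(\partial_x^\alpha\partial_h^\beta\varphi)(x,\lambda h)$, every derivative of $\chi\varphi_\lambda$ up to order $k$ is a finite sum of derivatives of $\chi$ times factors $\lambda^{|\beta|}(\partial_x^\alpha\partial_h^\beta\varphi)(x,\lambda h)$. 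Since $\lambda\leqslant 1$, the factors $\lambda^{|\beta|}$ are harmless, and $\pi_k(\chi\varphi_\lambda)\leqslant M(\chi,\varphi)$ for a constant $M$ independent of $\lambda$.

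Combining these two bounds yields $|\langle\lambda^{-d}u^\lambda_{\lambda^{-1}},\varphi\rangle|=|\langle u^\lambda,\chi\varphi_\lambda\rangle|\leqslant CM$, which is precisely the pointwise boundedness of $(\lambda^{-d}u^\lambda_{\lambda^{-1}})_{\lambda\in(0,1]}$ against an arbitrary $\varphi\in\mathcal{D}(U)$; a final Banach--Steinhaus application upgrades this to boundedness of the family in $\mathcal{D}^\prime(U)$. The only delicate step is the localization away from $I$: without the uniform annulus support, one could not replace $\varphi_\lambda$ by a test function in $\mathcal{D}(U\setminus I)$ and the Fréchet seminorm estimate would not be available, since the hypothesis only controls $u^\lambda$ off $I$. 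Everything else is bookkeeping with the scaling factor $\lambda\leqslant 1$.
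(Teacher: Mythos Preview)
Your proof is correct and follows essentially the same approach as the paper: reduce via the scaling identity to $\langle u^\lambda,\varphi_\lambda\rangle$, use the annulus support to localize away from $I$, apply Banach--Steinhaus to the bounded family $(u^\lambda)$, and exploit $\lambda\leqslant 1$ to bound the seminorms of $\varphi_\lambda$ uniformly. The only cosmetic difference is that you introduce an explicit cutoff $\chi$ to force $\chi\varphi_\lambda\in\mathcal{D}_{K''}(U\setminus I)$, whereas the paper instead invokes the variant of Banach--Steinhaus for compactly supported distributions tested against $C^\infty$ functions (the $u^\lambda$ are compactly supported in $h$ on the annulus, and $\varphi$ is compactly supported in $x$); both devices accomplish the same localization.
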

\begin{proof}
If the family $(u^\lambda)_{\lambda\in(0,1]}$ is uniformly supported on an annulus domain
of $U$, then
for all compact set $K\subset \mathbb{R}^n$, there exists $0<a<b$ such that $\forall\lambda, u^\lambda|_{(K\times \mathbb{R}^d)\cap U}$ is supported in a fixed annulus $\mathcal{A}=\{a\leqslant\vert h\vert\leqslant b\}\cap ((K\times \mathbb{R}^d)\cap U)$.
If $u^\lambda|_{(K\times \mathbb{R}^d)\cap U}$ is a bounded family of distributions supported on the $\textbf{fixed annulus}$ $\mathcal{A}=\underset{\text{compact in }\mathbb{R}^{n+d}}{\underbrace{\{a\leqslant\vert h\vert\leqslant b\}\cap (K\times \mathbb{R}^d)\cap U}}$, then the family $u^\lambda$ satisfies the following estimate by Banach Steinhaus:
$$\forall K^\prime\subset\mathbb{R}^{n+d} \text{compact},\exists (k, C), \forall \varphi\in \mathcal{D}_{K^\prime}\left(U\right), \sup_{\lambda\in(0,1]} \vert \left\langle u^\lambda, \varphi \right\rangle \vert\leqslant C  \pi_k(\varphi),$$ 
and we notice that the estimate is still valid for test functions in $C^\infty((K\times \mathbb{R}^{d})\cap U)$ (by compactness of $\mathcal{A}$):
\begin{equation}\label{est1}
\exists (k, C), \forall \varphi\in C^\infty((K\times \mathbb{R}^{d})\cap U), \sup_{\lambda\in(0,1]} \vert \left\langle u^\lambda, \varphi \right\rangle \vert\leqslant C\pi_{k\mathcal{A}}(\varphi),
\end{equation}
because $u^\lambda$ is compactly supported in the $h$ variables and $\varphi$ is compactly supported
in the $x$ variables.
For any test function $\varphi\in\mathcal{D}(U)$: $$\lambda^{-d}\vert\left\langle u^\lambda_{\lambda^{-1}},\varphi \right\rangle\vert=\lambda^{-d}\lambda^d\vert\left\langle u^\lambda,\varphi(.,\lambda.) \right\rangle\vert\leqslant  C\pi_{k\mathcal{A}}(\varphi_\lambda)$$ 
thus
\begin{equation}\label{est2}
\lambda^{-d}\vert\left\langle u^\lambda_{\lambda^{-1}},\varphi \right\rangle\vert\leqslant C\pi_{k}(\varphi)
\end{equation}
because of the estimate (\ref{est1}) on the family $(u^\lambda)_\lambda$. 
This proves that 
the family
$\left(\lambda^{-d}u^\lambda_{\lambda^{-1}}\right)_{\lambda\in(0,1]}$ 
is bounded in $\mathcal{D}^\prime(U\setminus I)$. 
\end{proof} 
\begin{coro}\label{ren1}
Let $(u^\lambda)_{\lambda\in(0,1]}$ be a bounded family in $\mathcal{D}^\prime(U\setminus I)$
which is uniformly supported on an annulus domain
of $U$.
If $s+d>0$, then the integral
\begin{equation}\label{formulaalgo1} 
\int_{0}^1\frac{d\lambda}{\lambda} \lambda^s u^\lambda_{\lambda^{-1}}  
\end{equation}
converges in $\mathcal{D}^\prime(U)$.
\end{coro}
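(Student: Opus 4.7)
The plan is to simply combine the boundedness conclusion of the preceding lemma with the integrability of $\lambda^{s+d-1}$ near $0$, which holds precisely because $s+d>0$.

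First, I invoke the previous lemma: the family $\bigl(\lambda^{-d} u^\lambda_{\lambda^{-1}}\bigr)_{\lambda\in(0,1]}$ is bounded in $\mathcal{D}^\prime(U)$. By the Banach--Steinhaus theorem recalled in the appendix, this gives a quantitative seminorm estimate: for every compact $K'\subset U$ there exist $k\in\mathbb{N}$ and $C>0$ such that
\begin{equation*}
\sup_{\lambda\in(0,1]} \lambda^{-d}\bigl|\langle u^\lambda_{\lambda^{-1}},\varphi\rangle\bigr| \leqslant C\,\pi_k(\varphi) \qquad \forall\varphi\in\mathcal{D}_{K'}(U).
\end{equation*}
Rewriting gives the pointwise-in-$\lambda$ bound $\lambda^s |\langle u^\lambda_{\lambda^{-1}},\varphi\rangle|\leqslant C\lambda^{s+d}\pi_k(\varphi)$.

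Next I integrate in $\lambda$ against $d\lambda/\lambda$. Since $s+d>0$, the function $\lambda\mapsto \lambda^{s+d-1}$ is integrable on $(0,1]$, and in fact $\int_0^1 \lambda^{s+d-1}\,d\lambda = (s+d)^{-1}$. Thus the integrand $\lambda\mapsto \lambda^s\langle u^\lambda_{\lambda^{-1}},\varphi\rangle$ is dominated by $C\pi_k(\varphi)\lambda^{s+d-1}$, which is $\frac{d\lambda}{\lambda}$-integrable; the integral
\begin{equation*}
T(\varphi):=\int_0^1 \frac{d\lambda}{\lambda}\,\lambda^s\,\langle u^\lambda_{\lambda^{-1}},\varphi\rangle
\end{equation*}
converges absolutely, and satisfies $|T(\varphi)|\leqslant \frac{C}{s+d}\pi_k(\varphi)$ for every $\varphi\in\mathcal{D}_{K'}(U)$.

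Finally, linearity of $T$ is clear from linearity of the duality pairing, and the seminorm bound above is exactly the continuity estimate defining a distribution on $U$; hence $T\in\mathcal{D}^\prime(U)$. The only ``obstacle'' is a measurability check on $\lambda\mapsto\langle u^\lambda_{\lambda^{-1}},\varphi\rangle$ so that the integral is well-defined in the Lebesgue sense, but this is automatic from the standing assumption (implicit throughout the chapter) that the families under consideration define elements of $L^p_{d\lambda/\lambda}([0,1],\mathcal{D}^\prime(U\setminus I))$ for some $p$; in any case the estimate above shows membership in $L^1_{d\lambda/\lambda}$, which is all that is needed.
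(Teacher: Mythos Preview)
Your proof is correct and follows essentially the same approach as the paper: factor $\lambda^s u^\lambda_{\lambda^{-1}}=\lambda^{s+d}\cdot\lambda^{-d}u^\lambda_{\lambda^{-1}}$, invoke the previous lemma (specifically its estimate~(\ref{est2})) for the boundedness of the second factor, and use the integrability of $\lambda^{s+d}$ against $d\lambda/\lambda$ to obtain the bound $|T(\varphi)|\leqslant \frac{C}{s+d}\pi_k(\varphi)$. Your extra remarks on linearity and measurability are fine but not essential additions.
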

\begin{proof}
When $s+d>0$, $\lambda\mapsto \lambda^su^\lambda_{\lambda^{-1}}=\underset{\text{integrable}}{\underbrace{\lambda^{s+d}}} \underset{\text{bounded}}{\underbrace{\lambda^{-d}u^\lambda_{\lambda^{-1}}}}\in L_{\frac{d\lambda}{\lambda}}^{1}([0,1],\mathcal{D}^\prime(U\setminus I)) $ and the integral
$t=\int_{0}^1\frac{d\lambda}{\lambda} \lambda^{s+d} \lambda^{-d}u^\lambda_{\lambda^{-1}}$
converges in $L_{\frac{d\lambda}{\lambda}}^1([0,1],\mathcal{D}^\prime(U\setminus I))$! 
By the estimate (\ref{est2}) on the bounded family $\lambda^{-d}u^\lambda_{\lambda^{-1}}$, we also have the estimate:
$$\vert \left\langle t,\varphi \right\rangle \vert=\vert \int_0^1\frac{d\lambda}{\lambda}\lambda^{s} \left\langle u^\lambda_{\lambda^{-1}},\varphi \right\rangle\vert $$ $$\leqslant \int_0^1\frac{d\lambda}{\lambda}\lambda^{s+d}\underset{\leqslant  C\pi_{k}(\varphi)\text{ by \ref{est2}} }{\underbrace{\vert \lambda^{-d}\left\langle u^\lambda_{\lambda^{-1}},\varphi \right\rangle\vert}} \leqslant  C\pi_{k}(\varphi) \int_{0}^1\frac{d\lambda}{\lambda} \lambda^{s+d}=\frac{C}{s+d}\pi_{k}(\varphi).$$
\end{proof}
\begin{prop}\label{scal1}
Under the assumptions of Corollary (\ref{ren1}),
$\int_{0}^1\frac{d\lambda}{\lambda} \lambda^s u^\lambda_{\lambda^{-1}}\in E_s(U)$.
\end{prop}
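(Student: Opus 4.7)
Starting point: Corollary (\ref{ren1}) already guarantees that the integral defining $t$ converges in $\mathcal{D}^\prime(U)$, so $t$ is a well-defined distribution on $U$. The remaining task is purely to verify the weak-homogeneity bound $\sup_{\mu\in(0,1]}|\mu^{-s}\langle t_\mu,\varphi\rangle|<\infty$ for every $\varphi\in\mathcal{D}(U)$.

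The plan is to unfold $\langle t_\mu,\varphi\rangle$ by the definitions of $t_\mu$ and of $\langle u^\lambda_{\lambda^{-1}},\cdot\rangle$, and then to perform the change of variables $\nu=\lambda/\mu$ inside the resulting $\lambda$-integral. The identity $\langle f_\mu,\varphi\rangle=\mu^{-d}\langle f,\varphi_{\mu^{-1}}\rangle$, combined with $\langle u^\lambda_{\lambda^{-1}},\psi\rangle=\lambda^d\langle u^\lambda,\psi_\lambda\rangle$ and the elementary $(\varphi_{\mu^{-1}})_\lambda=\varphi_{\lambda/\mu}$, should collapse into the compact identity
\begin{equation*}
\mu^{-s}\langle t_\mu,\varphi\rangle=\int_0^{1/\mu}\frac{d\nu}{\nu}\,\nu^{s+d}\,\langle u^{\mu\nu},\varphi_\nu\rangle.
\end{equation*}
Since $\mu\nu\in(0,1]$ throughout this integration, each $u^{\mu\nu}$ lies in the given uniformly-annular bounded family, so one can invoke the Banach--Steinhaus estimate (\ref{est1}): there exist $k$ and $C>0$, depending only on the projection $K\subset\mathbb{R}^n$ of $\supp\varphi$ and on the associated compact annulus $\mathcal{A}$ with $a\leq|h|\leq b$, such that $|\langle u^{\mu\nu},\varphi_\nu\rangle|\leq C\,\pi_{k,\mathcal{A}}(\varphi_\nu)$.

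The heart of the argument is then to control $\pi_{k,\mathcal{A}}(\varphi_\nu)$ uniformly in $\mu$, and two complementary effects must be exploited: (i) \emph{support truncation}: since $|h|\geq a>0$ on $\mathcal{A}$ while $\supp\varphi\subset\{|h|\leq R\}$ for some $R>0$, the function $\varphi_\nu(x,h)=\varphi(x,\nu h)$ vanishes identically on $\mathcal{A}$ as soon as $\nu>R/a$, which truncates the integration interval to $(0,\min(1/\mu,R/a)]$; (ii) \emph{derivative bound}: the chain rule yields $\pi_{k,\mathcal{A}}(\varphi_\nu)\leq\max(1,\nu^k)\,\pi_k(\varphi)$. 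Together these produce the uniform estimate
\begin{equation*}
|\mu^{-s}\langle t_\mu,\varphi\rangle|\leq C\,\pi_k(\varphi)\left(\int_0^1\nu^{s+d-1}\,d\nu+\int_1^{R/a}\nu^{s+d+k-1}\,d\nu\right),
\end{equation*}
where the first integral is finite thanks to the crucial hypothesis $s+d>0$ and the second is finite because the interval is bounded. The right-hand side depends only on $\varphi$, establishing the required $E_s(U)$ bound.

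The main obstacle I anticipate is the regime $\nu\gg 1$ (equivalently $\lambda\sim\mu\ll 1$): there $\pi_{k,\mathcal{A}}(\varphi_\nu)$ grows polynomially in $\nu$ with no compensating decay from $\nu^{s+d}$, so a naive estimate would diverge at the upper endpoint $\nu=1/\mu\to\infty$. The hypothesis that $(u^\lambda)_\lambda$ is uniformly supported on an annulus \emph{bounded away from} $I$ is precisely what saves the day: it forces $\varphi_\nu|_{\mathcal{A}}$ to vanish past the $\varphi$-dependent threshold $R/a$, making the effective integration interval bounded independently of $\mu$. A minor technical point, dealt with along the way, is the Fubini required to slide $\langle\cdot,\varphi\rangle$ through the $\lambda$-integral, but this is automatic from the same uniform $L^1$-in-$\nu$ bound that the argument itself produces.
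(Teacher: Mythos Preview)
Your proof is correct and follows essentially the same route as the paper's: the change of variable $\nu=\lambda/\mu$, the annular-support truncation at $\nu=R/a$, and the uniform Banach--Steinhaus estimate (\ref{est1}) are exactly the three ingredients the paper uses. Your treatment is in fact slightly more careful than the paper's in the $\nu>1$ regime---you explicitly split the integral at $\nu=1$ and track the chain-rule growth $\pi_{k,\mathcal{A}}(\varphi_\nu)\leq\max(1,\nu^k)\pi_k(\varphi)$, whereas the paper writes simply $|\langle u^{\lambda\mu},\varphi_\lambda\rangle|\leq C\pi_k(\varphi)$ and implicitly absorbs the $\nu$-dependence into the $\varphi$-dependent constant.
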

\begin{proof}
Recall that we proved that the integral
$t= \int_{0}^1\frac{d\lambda}{\lambda} \lambda^{s} u^\lambda_{\lambda^{-1}} $
converges in $\mathcal{D}^\prime(U)$ and we would like to prove that $t\in E_s(U)$.
We try to bound the quantity $\mu^{-s} t_\mu$:
%$$\mu^{-s} \left\langle t_\mu ,\varphi \right\rangle=\int_{0}^1\frac{d\lambda}{\lambda} \mu^{-s}\lambda^{s} \left\langle c_\lambda(\mu\lambda^{-1}.),\varphi \right\rangle=\int_{0}^1\frac{d\lambda}{\lambda} \frac{\lambda^{s}}{\mu^s} \left\langle c_\lambda(\frac{\mu}{\lambda}.),\varphi \right\rangle=\int_{0}^{\frac{1}{\mu}}\frac{d\lambda}{\lambda} \lambda^{s} \left\langle c_{\lambda\mu}(\lambda^{-1}.),\varphi \right\rangle $$
$$\forall 0< \mu\leqslant 1  , \mu^{-s} \left\langle t_\mu ,\varphi \right\rangle=\mu^{-s-d} \left\langle t ,\varphi_{\mu^{-1}} \right\rangle=\int_{0}^1\frac{d\lambda}{\lambda} \mu^{-s-d}\lambda^{s} \left\langle u^\lambda_{\lambda^{-1}},\varphi_{\mu^{-1}} \right\rangle$$ $$=\int_{0}^1\frac{d\lambda}{\lambda} \left(\frac{\lambda}{\mu}\right)^{s+d} \left\langle u^\lambda,\varphi_{\frac{\lambda}{\mu}} \right\rangle=\int_{0}^{\frac{1}{\mu}}\frac{d\lambda}{\lambda} \lambda^{s+d} \left\langle u^{\lambda\mu} ,\varphi_\lambda \right\rangle.$$
We use the fact that there exists $R>0$ such that $\varphi\in\mathcal{D}(U)$ is supported inside the domain $\{\vert h\vert\leqslant R\}$. Then $\varphi_\lambda=\varphi(.,\lambda.)$ is supported in $\{\vert h\vert\leqslant \lambda^{-1}R\}$.
We denote by $\pi_1$ the projection 
$\pi_1:=(x,h)\in\mathbb{R}^{n+d}\mapsto 
(x,0)\in\mathbb{R}^n\times\{0\}$
and we make the
notation abuse
$\pi_1(x,h)=(x)$. 
Then $K=\pi_1(\text{supp }\varphi)$ 
is compact in $\mathbb{R}^n$ thus, 
by assumption on the
family $u$, $u^{\lambda\mu}|_{(K\times\mathbb{R}^d)\cap U}$ is supported in $\{a\leqslant \vert h\vert\leqslant b\}$ for some $0<a<b$
and $\left\langle u^{\lambda\mu} ,\varphi_\lambda \right\rangle $ must vanish when
$\lambda^{-1}R\leqslant a \Leftrightarrow \lambda\geqslant \frac{R}{a} $. Finally:
$$\mu^{-s} \left\langle t_\mu ,\varphi \right\rangle=\int_{0}^{\frac{R}{a}}\frac{d\lambda}{\lambda} \lambda^{s+d} \left\langle u^{\lambda\mu} ,\varphi_{\lambda} \right\rangle.$$
Since $\varphi_\lambda\in C^\infty((K\times\mathbb{R}^d)\cap U)$, 
by estimate (\ref{est1}), we have $\vert \left\langle u^{\lambda\mu},\varphi_\lambda\right\rangle \vert\leqslant C\pi_{k,\mathcal{A}}(\varphi)\leqslant C\pi_{k}(\varphi)$ and
$$\vert \mu^{-s} \left\langle t_\mu ,\varphi \right\rangle\vert \leqslant \left(\frac{R}{a}\right)^{s+d}\frac{C}{s+d}\pi_{k}\left(\varphi\right).$$ 
\end{proof}

\begin{prop}\label{ren2}
Let $(u^\lambda)_{\lambda\in(0,1]}$ be a bounded family in $\mathcal{D}^\prime(U\setminus I)$
which is uniformly supported on an annulus domain
of $U$.
If $-m-1< s+d \leqslant -m, m\in\mathbb{N}$, then the integral $\int_{0}^1\frac{d\lambda}{\lambda} \lambda^s u^\lambda_{\lambda^{-1}}$ needs a renormalization. There is a family $(\tau^\lambda)_{\lambda\in(0,1]}$ of distributions supported on $I$ 
such that the renormalized integral
\begin{equation}\label{formulaalgo2} 
\int_{0}^1\frac{d\lambda}{\lambda} \lambda^s \left(u^\lambda_{\lambda^{-1}}-\tau^\lambda\right)   
\end{equation}
converges in $\mathcal{D}^\prime(U)$.
\end{prop}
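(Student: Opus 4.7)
The plan is to mimic the classical Hadamard finite part / Epstein--Glaser prescription: subtract from the pairing $\langle u^\lambda_{\lambda^{-1}},\varphi\rangle$ the contribution of the order-$m$ Taylor polynomial of $\varphi$ in the transverse variable $h$ at $h=0$. The Taylor remainder will gain an extra factor $\lambda^{m+1}$, which is exactly what is needed to turn the divergent weight $\lambda^{s+d-1}d\lambda$ into the integrable weight $\lambda^{s+d+m}d\lambda$, since the hypothesis $-m-1<s+d$ gives $s+d+m+1>0$.

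Concretely, starting from the identity $\langle u^\lambda_{\lambda^{-1}},\varphi\rangle=\lambda^d\langle u^\lambda,\varphi(\cdot,\lambda\cdot)\rangle$ already used in the previous lemma, I would Taylor expand
\begin{equation*}
\varphi(x,\lambda h)=\sum_{|\alpha|\leqslant m}\frac{(\lambda h)^\alpha}{\alpha!}\partial_h^\alpha\varphi(x,0)+R_m(x,\lambda h),
\end{equation*}
and define
\begin{equation*}
\langle\tau^\lambda,\varphi\rangle:=\sum_{|\alpha|\leqslant m}\frac{\lambda^{|\alpha|+d}}{\alpha!}\,\bigl\langle u^\lambda(x,h),h^\alpha\bigr\rangle_h\,\partial_h^\alpha\varphi(x,0).
\end{equation*}
The partial pairing in $h$ is well defined because $u^\lambda$ is compactly supported in $h$ on the fixed annulus. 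Since $\tau^\lambda$ depends on $\varphi$ only through the jet $\partial_h^\alpha\varphi(\cdot,0)$ for $|\alpha|\leqslant m$, it is manifestly of the form $\sum_{|\alpha|\leqslant m}v_\alpha^\lambda(x)\otimes\partial_h^\alpha\delta_0(h)$, hence supported on $I$ as required.

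By construction,
\begin{equation*}
\langle u^\lambda_{\lambda^{-1}}-\tau^\lambda,\varphi\rangle=\lambda^d\,\langle u^\lambda(x,h),R_m(x,\lambda h)\rangle.
\end{equation*}
The heart of the proof is the derivative bound on $R_m(\cdot,\lambda\cdot)$ restricted to the annulus $\mathcal{A}=\{a\leqslant|h|\leqslant b\}\cap(K\times\mathbb{R}^d)\cap U$. Using $\partial_h^\gamma R_m(x,\lambda h)=\lambda^{|\gamma|}(\partial_y^\gamma R_m)(x,\lambda h)$ together with the standard Taylor remainder estimate $|\partial_y^\gamma R_m(x,y)|\leqslant C|y|^{\max(m+1-|\gamma|,0)}\pi_{m+1+|\gamma|}(\varphi)$, a direct case analysis on whether $|\gamma|\leqslant m$ or $|\gamma|>m$ gives the uniform bound
\begin{equation*}
\pi_{k,\mathcal{A}}\bigl(R_m(\cdot,\lambda\cdot)\bigr)\leqslant C_k\,\lambda^{m+1}\,\pi_{m+1+k}(\varphi),\qquad\lambda\in(0,1].
\end{equation*}
Plugging this into the uniform bound (\ref{est1}) on the family $(u^\lambda)_\lambda$ yields $|\langle u^\lambda_{\lambda^{-1}}-\tau^\lambda,\varphi\rangle|\leqslant C\lambda^{d+m+1}\pi_{m+1+k}(\varphi)$, so the renormalized integrand is majorized by $C\lambda^{s+d+m+1}\pi_{m+1+k}(\varphi)$, which is integrable against $d\lambda/\lambda$ on $(0,1]$ precisely because $s+d+m+1>0$. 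This gives absolute convergence of the integral in $\mathcal{D}^\prime(U)$ and continuity of the resulting linear form on $\mathcal{D}(U)$.

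The main obstacle is really just the bookkeeping in the last estimate: choosing the truncation order to be exactly $m$ so that the gained power $\lambda^{m+1}$ fits the band $-m-1<s+d\leqslant -m$, and verifying that the chain-rule factors of $\lambda$ coming from the composition $h\mapsto\lambda h$ combine correctly with the $|y|^{m+1-|\gamma|}$ factor in the Taylor remainder. Everything else (the fact that $\tau^\lambda$ is a well-defined distribution supported on $I$ independent of any cutoff used to realize $\langle u^\lambda,h^\alpha\rangle_h$, and the interchange of integral and duality bracket) follows from the uniform compact support of $u^\lambda$ in the annulus by the same Banach--Steinhaus argument already used in Corollary~\ref{ren1}.
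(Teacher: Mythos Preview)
Your proposal is correct and follows essentially the same route as the paper: define $\tau^\lambda$ by pairing $u^\lambda_{\lambda^{-1}}$ against the order-$m$ Taylor polynomial of $\varphi$ in $h$, then observe that the remaining pairing equals $\lambda^d\langle u^\lambda,R_m(\cdot,\lambda\cdot)\rangle$, which picks up an extra $\lambda^{m+1}$ and makes the weight $\lambda^{s+d+m}$ integrable on $(0,1]$ against $d\lambda/\lambda$ by the hypothesis $s+d+m+1>0$. The only cosmetic difference is that the paper writes the remainder in Hadamard form $I_m(x,h)=\sum_{|i|=m+1}h^iH_i(x,h)$ so that the factor $\lambda^{m+1}$ falls out by homogeneity, whereas you obtain the same factor from explicit derivative bounds on $R_m$; both arguments feed into the identical application of estimate~(\ref{est1}) and Banach--Steinhaus.
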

\begin{proof}
If $-m-1< s+d\leqslant -m$, 
then we repeat 
the previous proof 
except we have to subtract 
to $\varphi$ its Taylor polynomial $P_m$ of order $m$ in $h$. 
We call $I_m$ the Taylor remainder.
Then $\varphi-P_m=I_m$. 
In coordinates, we get
$$\varphi(x,h)-\underset{P_m}{\underbrace{\sum_{\vert i\vert \leqslant m} 
\frac{h^i}{i!}\frac{\partial^i\varphi}{\partial h^i}(x,0)}}
=I_m(x,h)=\sum_{\vert i\vert =m+1} h^i H_i(x,h)$$ 
where $(H_i)_i$ are smooth functions.
$R_\lambda(x,h)=R(x,\lambda h)=\lambda^{m+1}\sum_{\vert i\vert =m+1} h^i H_i(x,\lambda h)$. 
We define a distribution supported on $I$, which we call ``counterterm'':
\begin{equation} 
\left\langle \tau^\lambda ,\varphi \right\rangle = \left\langle u^\lambda_{\lambda^{-1}},\sum_{\vert i\vert \leqslant m} \frac{h^i}{i!}\frac{\partial^i\varphi}{\partial h^i}(\cdot,0)\right\rangle
\end{equation} 
where we abusively denoted the
expression
$\frac{\partial^i\varphi}{\partial h^i}\circ \pi_1$ by 
$\frac{\partial^i\varphi}{\partial h^i}(\cdot,0)$ .
We take into account the counterterm
$$\lambda^{s}\left\langle u^\lambda_{\lambda^{-1}}-\tau^\lambda,\varphi \right\rangle=\lambda^{s}\left\langle u^\lambda_{\lambda^{-1}},\varphi(x,h)-\sum_{\vert i\vert \leqslant m} \frac{h^i}{i!}\frac{\partial^i\varphi}{\partial h^i}(\cdot,0) \right\rangle $$
$$=\lambda^{s}\left\langle u^\lambda_{\lambda^{-1}},\sum_{\vert i\vert =m+1} h^i H_i(x,h)\right\rangle=\lambda^{s+d}\left\langle u^\lambda ,\lambda^{(m+1)}\sum_{\vert i\vert =m+1} h^i H_i(x,\lambda h)\right\rangle $$
$$=\lambda^{(d+s+m+1)}\left\langle u^\lambda ,\sum_{\vert i\vert =m+1} h^i H_i(x,\lambda h)\right\rangle   $$
Hence
$$\int_0^1\frac{d\lambda}{\lambda} 
\lambda^{s}
\left\langle u^\lambda_{\lambda^{-1}}-
\tau^\lambda,\varphi \right\rangle=\int_0^1\frac{d\lambda}{\lambda} \underset{\text{integrable}}{\underbrace{\lambda^{(d+s+m+1)}}}\underset{\text{bounded}}{\underbrace{\left\langle u^\lambda, \sum_{\vert i\vert =m+1} h^iH_i(x,\lambda h)\right\rangle}} $$ 
since $\forall \lambda\in(0,1], h^iH_i(x,\lambda h)\in C^\infty((K\times \mathbb{R}^{d})\cap U)$,
we can use estimate (\ref{est1})
$$\left|\int_0^1\frac{d\lambda}{\lambda} 
\lambda^{s}
\left\langle u^\lambda_{\lambda^{-1}}-
\tau^\lambda,\varphi \right\rangle \right| \leqslant \frac{C}{d+s+m+1}\sup_{\lambda\in(0,1]}\pi_{k,\mathcal{A}}(\underset{\text{derivatives of }\varphi\text{ order }m+1}{\underbrace{\sum_{\vert i\vert =m+1}h^iH_i(x,\lambda h)}}) $$
$$\int_0^1\frac{d\lambda}{\lambda} \vert\lambda^{s}\left\langle u^\lambda_{\lambda^{-1}}-\tau^\lambda,\varphi \right\rangle\vert \leqslant \frac{\tilde{C}}{d+s+m+1} \pi_{ k+m+1}\left(\varphi\right)$$
where the constant $\tilde{C}$ does not depend on $\varphi$ and can be estimated by the 
integral remainder formula. 
\end{proof}
\begin{prop}\label{scal2}
Under the assumptions of proposition (\ref{ren2}),
if $s$ is not an integer then $\int_{0}^1\frac{d\lambda}{\lambda} \lambda^s \left(u^\lambda_{\lambda^{-1}}-\tau^\lambda\right)\in E_s(U)$. 
\end{prop}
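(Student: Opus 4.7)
The plan is to mirror the scaling argument from Proposition \ref{scal1}, but carrying the counterterm $\tau^\lambda$ throughout. Writing $t = \int_0^1 \frac{d\lambda}{\lambda}\,\lambda^s(u^\lambda_{\lambda^{-1}} - \tau^\lambda)$, I will control $\mu^{-s}\langle t_\mu,\varphi\rangle$ uniformly in $\mu\in(0,1]$. The key algebraic observation is that the Taylor jet at $h=0$ commutes with scaling: since $\partial^i_h\varphi_{\mu^{-1}}(x,0) = \mu^{-|i|}\partial^i_h\varphi(x,0)$, we have $P_m(\varphi_{\mu^{-1}}) = (P_m\varphi)_{\mu^{-1}}$, and therefore $I_m(\varphi_{\mu^{-1}}) = (I_m\varphi)_{\mu^{-1}}$. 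Combining this with the definition of $\tau^\lambda$ yields $\langle u^\lambda_{\lambda^{-1}} - \tau^\lambda, \varphi_{\mu^{-1}}\rangle = \langle u^\lambda_{\lambda^{-1}}, (I_m\varphi)_{\mu^{-1}}\rangle = \lambda^d\langle u^\lambda, (I_m\varphi)_{\lambda/\mu}\rangle$, and the same change of variables $\lambda\mapsto\mu\lambda$ as in Proposition \ref{scal1} should give
\[
\mu^{-s}\langle t_\mu,\varphi\rangle = \int_0^{1/\mu}\frac{d\lambda}{\lambda}\,\lambda^{s+d}\,\langle u^{\mu\lambda}, (I_m\varphi)_\lambda\rangle.
\]

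The next step is to bound this integral uniformly in $\mu\in(0,1]$ by splitting the $\lambda$-range into three pieces, using the annulus bound $\mathrm{supp}\,u^{\mu\lambda}|_{K\times\mathbb{R}^d}\subset\{a\leqslant |h|\leqslant b\}$ for $K=\pi_1(\mathrm{supp}\,\varphi)$, with $R$ a bound on $|h|$ on $\mathrm{supp}\,\varphi$. On $[0,1]$, the integral remainder formula $I_m\varphi(x,\lambda h) = \lambda^{m+1}\sum_{|i|=m+1}h^iH_i(x,\lambda h)$ produces an extra factor $\lambda^{m+1}$, and the argument of Proposition \ref{ren2} dominates the integrand by $C\lambda^{s+d+m+1}\pi_{k+m+1}(\varphi)$, absolutely integrable because $s+d+m+1 > 0$. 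On the compact middle range $[1, R/a]$, the family $(I_m\varphi)_\lambda$ stays in a bounded set of $C^\infty((K\times\mathbb{R}^d)\cap U)$, so estimate \eqref{est1} immediately provides a uniform-in-$\mu$ bound.

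The main obstacle is the tail $\lambda\in[R/a,1/\mu]$, which is exactly where the non-integer assumption on $s$ is needed. For such $\lambda$, the support $\{|h|\leqslant R/\lambda\}\subset\{|h|<a\}$ of $\varphi(\cdot,\lambda\cdot)$ is disjoint from the annulus, so on $\mathrm{supp}\,u^{\mu\lambda}$ we are left with $(I_m\varphi)_\lambda = -P_m\varphi(\cdot,\lambda\cdot)$, a polynomial in $\lambda h$ of degree at most $m$. A direct derivative estimate yields $\pi_{k,\mathcal{A}}(P_m\varphi(\cdot,\lambda\cdot)) \leqslant C\lambda^m\pi_{k+m}(\varphi)$ for $\lambda\geqslant 1$, so \eqref{est1} bounds the tail by $C\pi_{k+m}(\varphi)\int_{R/a}^{1/\mu}\frac{d\lambda}{\lambda}\,\lambda^{s+d+m}$, which stays uniformly bounded in $\mu\in(0,1]$ exactly when $s+d+m<0$. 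Since $-m-1<s+d\leqslant -m$ and the non-integrality of $s$ forces $s+d\notin\mathbb{Z}$ (as $d\in\mathbb{N}$), the inequality is strict: $s+d+m\in(-1,0)$, giving the required uniform bound. Had $s$ been an integer, $s+d+m=0$ would produce a logarithmic divergence $-\log\mu$ that destroys the $E_s$-estimate, so non-integrality is precisely what rules out this logarithmic obstruction. Summing the three contributions yields $|\mu^{-s}\langle t_\mu,\varphi\rangle|\leqslant C\pi_{k+m+1}(\varphi)$ uniformly in $\mu$, hence $t\in E_s(U)$.
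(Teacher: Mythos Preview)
Your proof is correct and follows essentially the same approach as the paper's own argument (which is given jointly with Proposition~\ref{scal3noninteger}): after the change of variables $\lambda\mapsto\mu\lambda$, one splits the range of integration and isolates the tail where only the Taylor polynomial survives, then observes that $s+d+m<0$ (equivalently $s\notin\mathbb{Z}$) kills the potential logarithmic divergence. The only cosmetic differences are that the paper keeps track of each multi-index $|i|\leqslant m$ separately in the tail (obtaining $(\lambda/\mu)^{s+d+|i|}$ term by term) rather than using your cruder but sufficient bound $\pi_{k,\mathcal{A}}((P_m\varphi)_\lambda)\leqslant C\lambda^m$, and that you phrase the commutation $P_m(\varphi_{\mu^{-1}})=(P_m\varphi)_{\mu^{-1}}$ explicitly at the outset.
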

\begin{prop}\label{scal3noninteger}
Under the assumptions of proposition (\ref{ren2}), if $s+d$ is a $\textbf{non positive integer}$ then $\int_{0}^1\frac{d\lambda}{\lambda} \lambda^s \left(u^\lambda_{\lambda^{-1}}-
\tau^\lambda\right)\in E_{s^\prime}(U),\forall s^\prime<s$, and
$t=\int_{0}^1\frac{d\lambda}{\lambda}
\lambda^s \left(u^\lambda_{\lambda^{-1}}-
\tau^\lambda\right)$ 
satisfies the estimate
\begin{equation}
\forall \varphi \in \mathcal{D}(U), \exists C, \vert\mu^{-s}\left\langle t_\mu,\varphi \right\rangle\vert\leqslant C\left(1+\vert \log\mu\vert  \right). 
\end{equation}
\end{prop}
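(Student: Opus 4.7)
The plan is to follow the template of the proof of Proposition \ref{scal1}, but to carefully track the effect of the counterterm subtraction introduced in Proposition \ref{ren2}; the integer case $s+d=-m$ is exactly the borderline value, so the decay that previously made the dilated integral integrable is replaced by a purely logarithmic one. First I would establish the algebraic identity that the Taylor polynomial $P_m$ in $h$ commutes with the scaling operation, i.e.\ $P_m[\varphi_\mu]=(P_m[\varphi])_\mu$ for every $\mu>0$, which is an immediate computation from $\partial_h^i\varphi_\mu(x,0)=\mu^{|i|}\partial_h^i\varphi(x,0)$.

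Second, starting from $t=\int_0^1\tfrac{d\lambda}{\lambda}\lambda^s(u^\lambda_{\lambda^{-1}}-\tau^\lambda)$, I would compute $\mu^{-s}\langle t_\mu,\varphi\rangle=\mu^{-s-d}\langle t,\varphi_{\mu^{-1}}\rangle$, expand the duality pairings using the identity $\langle u^\lambda_{\lambda^{-1}},\psi\rangle=\lambda^d\langle u^\lambda,\psi_\lambda\rangle$, apply the commutation between $P_m$ and scaling to the counterterm piece, and then perform the change of variables $\nu=\lambda/\mu$, $d\lambda/\lambda=d\nu/\nu$. This should produce the clean expression
\[
\mu^{-s}\langle t_\mu,\varphi\rangle=\int_0^{1/\mu}\frac{d\nu}{\nu}\,\nu^{s+d}\bigl\langle u^{\mu\nu},\,I_m[\varphi_\nu]\bigr\rangle,
\]
where $I_m[\varphi_\nu]=\varphi_\nu-P_m[\varphi_\nu]$ is the Taylor remainder in $h$. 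Note that throughout this integration $\mu\nu\in(0,1]$, so all boundedness hypotheses on $(u^{\mu\nu})$ remain in force.

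Third, I would split the $\nu$-integral at $\nu=1$ and estimate the two pieces separately. For $\nu\in(0,1]$ the integral form of the Taylor remainder gives $I_m[\varphi_\nu](x,h)=\nu^{m+1}\tilde H^\nu(x,h)$ with $\tilde H^\nu$ controlled uniformly in $\nu$ by some $\pi_{k+m+1}(\varphi)$ on the annulus $\mathcal{A}$, exactly as in the proof of Proposition \ref{ren2}; combined with estimate (\ref{est1}) this bounds the integrand by $\nu^{s+d+m+1}C\pi_{k+m+1}(\varphi)=\nu\cdot C\pi_{k+m+1}(\varphi)$, which is integrable on $(0,1]$. For $\nu\in[1,1/\mu]$ I would use that $\varphi_\nu$ is supported in $\{|h|\leq R/\nu\}$ while $u^{\mu\nu}\vert_{K\times\mathbb{R}^d}$ lives in the fixed annulus $\{a\leq|h|\leq b\}$: once $\nu\geq R/a$ the pairing $\langle u^{\mu\nu},\varphi_\nu\rangle$ vanishes identically, so $\langle u^{\mu\nu},I_m[\varphi_\nu]\rangle=-\langle u^{\mu\nu},P_m[\varphi_\nu]\rangle$, and expanding $P_m[\varphi_\nu]=\sum_{|i|\leq m}\nu^{|i|}h^i\partial_h^i\varphi(x,0)/i!$ together with (\ref{est1}) gives the polynomial bound $C\nu^m\pi_k(\varphi)$. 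The factor $\nu^{s+d}=\nu^{-m}$ exactly cancels this growth, so the integrand is uniformly bounded on $[R/a,1/\mu]$, and integration against $d\nu/\nu$ produces the logarithm $|\log\mu|$; the bounded range $\nu\in[1,R/a]$ contributes only a finite constant.

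Putting this together yields $|\mu^{-s}\langle t_\mu,\varphi\rangle|\leq C(1+|\log\mu|)$, which is the claimed estimate. The membership $t\in E_{s'}(U)$ for every $s'<s$ then follows from
\[
\mu^{-s'}|\langle t_\mu,\varphi\rangle|=\mu^{s-s'}\cdot\mu^{-s}|\langle t_\mu,\varphi\rangle|\leq C\mu^{s-s'}(1+|\log\mu|),
\]
and the right-hand side is bounded on $(0,1]$ since $s-s'>0$ dominates the logarithm. The main technical obstacle will be the large-$\nu$ regime $[R/a,1/\mu]$, where the usual Taylor remainder cancellation is lost: one must instead exploit the vanishing of $\varphi_\nu$ on the annulus and estimate $P_m[\varphi_\nu]$ directly against $u^{\mu\nu}$, extracting the precise growth $\nu^m$ that exactly matches the critical weight $\nu^{s+d}=\nu^{-m}$ and is responsible for the logarithmic divergence.
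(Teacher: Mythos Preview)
Your proof is correct and follows essentially the same route as the paper: scale, rewrite in terms of the Taylor remainder $I_m$, change variables to $\nu=\lambda/\mu$, and split the $\nu$-integral according to whether $\varphi_\nu$ still meets the annulus supporting $u^{\mu\nu}$. The paper splits directly at $\nu=R/a$ rather than at $\nu=1$, and in the large-$\nu$ region it analyzes the counterterm degree by degree (the terms with $\lvert i\rvert<m$ are individually bounded, only $\lvert i\rvert=m$ produces the logarithm), whereas you use the cruder global bound $\lvert\langle u^{\mu\nu},P_m[\varphi_\nu]\rangle\rvert\leq C\nu^m$; both yield the same $|\log\mu|$ control.
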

\begin{proof}
To check the homogeneity of the renormalized integral is a little tricky since we have to take the scaling of counterterms into account.
When we scale the smooth function then we should scale simultaneously the Taylor polynomial and the remainder 
$$\varphi_\lambda=P_\lambda+R_\lambda $$  
  We want to know to which scale space $E_{s^\prime}$ the distribution $t$ belongs:
$$\mu^{-s^\prime}\left\langle t_\mu,\varphi \right\rangle
=\mu^{s-s^\prime}\mu^{-s-d}\left\langle t,\varphi_{\mu^{-1}} \right\rangle
=\mu^{s-s^\prime}\int_0^1\frac{d\lambda}{\lambda} \lambda^{s}\left\langle u^\lambda_{\lambda^{-1}}-\tau^\lambda,\mu^{-d-s} \varphi_{\mu^{-1}} \right\rangle$$
$$=\mu^{s-s^\prime}\int_0^1\frac{d\lambda}{\lambda}\left(\frac{\lambda}{\mu}\right)^{s}\mu^{-d}\left\langle u^\lambda_{\lambda^{-1}},\varphi(x,\frac{h}{\mu})-\sum_{\vert i\vert \leqslant m} \frac{h^i}{\mu^ii!}\frac{\partial^i\varphi}{\partial h^i}(x,0) \right\rangle $$
$$=\mu^{s-s^\prime}\int_0^1\frac{d\lambda}{\lambda}\left(\frac{\lambda}{\mu}\right)^{s+d}\left\langle u^\lambda ,\varphi(x,\frac{\lambda}{\mu}h)-\sum_{\vert i\vert \leqslant m} \frac{h^i}{\mu^ii!}\frac{\partial^i\varphi}{\partial h^i}(x,0) \right\rangle .$$
$\varphi_{\frac{\lambda}{\mu}}$ 
is supported on $\vert h\vert\leqslant \frac{\mu R}{\lambda}$
thus when $\frac{R\mu}{\lambda}\leqslant a\Leftrightarrow \frac{R\mu}{a}\leqslant \lambda$, 
the support of $\varphi_{\frac{\lambda}{\mu}}$
does not meet 
the support
of $u^\lambda$
because $u^\lambda$
is supported on $a\geqslant\vert h\vert$,
whereas $\sum_{\vert i\vert \leqslant m} \frac{(\lambda h)^i}{i!}\frac{\partial^i\varphi}{\partial h^i}(x,0)$ is supported everywhere because it is a Taylor polynomial. 
Consequently, we must split the integral in two parts
$$\mu^{-s}\left\langle t_\mu,\varphi \right\rangle=I_1+I_2$$ 
$$I_1=\int_0^{\frac{R\mu}{a}}\frac{d\lambda}{\lambda}\left(\frac{\lambda}{\mu}\right)^{s+d}\left\langle u^\lambda , I_{m,\frac{\lambda}{\mu}} \right\rangle $$ $$=\int_0^{\frac{R\mu}{a}}\frac{d\lambda}{\lambda} \left(\frac{\lambda}{\mu}\right)^{(d+s+m+1)}\left\langle u^\lambda, \sum_{\vert i\vert =m+1} h^iH_i(x,\frac{\lambda}{\mu} h)\right\rangle $$ 
$$I_2= \int_{\frac{R\mu}{a}}^1\frac{d\lambda}{\lambda}\left(\frac{\lambda}{\mu}\right)^{s+d}\left\langle u^\lambda , I_{m,\frac{\lambda}{\mu}} \right\rangle$$ 
$$=\int_{\frac{R\mu}{a}}^1\frac{d\lambda}{\lambda}\left(\frac{\lambda}{\mu}\right)^{s+d}\underset{\text{no contribution of $\varphi_{\frac{\lambda}{\mu}}$ since }\frac{R\mu}{a}\leqslant \lambda}{\left\langle u^\lambda ,\varphi(x,\frac{\lambda}{\mu}h)-\sum_{\vert i\vert \leqslant m} \frac{(\lambda h)^i}{\mu^ii!}\frac{\partial^i\varphi}{\partial h^i}(x,0) \right\rangle} $$  
and we apply a variable change for $I_1$: 
$$I_1=\int_0^{\frac{R}{a}}\frac{d\lambda}{\lambda} \lambda^{(d+s+m+1)}\left\langle u^\lambda_\mu, \sum_{\vert i\vert =m+1} h^iH_i(x,\lambda h)\right\rangle\hfill$$ again by estimate (\ref{est1}) $$\hfill\leqslant \left(\frac{R}{a}\right)^{-(d+s+m+1)} \frac{C}{s+d+m+1} \sup_{\lambda\in(0,1]} \pi_{k,\mathcal{A}}\left(\sum_{\vert i\vert =m+1} h^iH_i(x,\lambda h) \right)$$
and each $H^i$ is a term in the Taylor remainder $I_m$ of $\varphi$,
$$I_1 \leqslant  C_1  \pi_{k+m+1}( \varphi).$$
Notice that in the second term only the counterterm contributes 
$$I_2=\int_{\frac{R\mu}{a}}^1\frac{d\lambda}{\lambda}
\left(\frac{\lambda}{\mu}\right)^{s+d}\left\langle u^\lambda ,-\sum_{\vert i\vert \leqslant m} \frac{(\lambda h)^i}{\mu^i i!}\frac{\partial^i\varphi}{\partial h^i}(x,0) \right\rangle $$
$$=\int_{\frac{R\mu}{a}}^1\frac{d\lambda}{\lambda}\left\langle u^\lambda ,-\sum_{\vert i\vert \leqslant m} \left(\frac{\lambda}{\mu}\right)^{s+d+i}\frac{h^i}{i!}\frac{\partial^i\varphi}{\partial h^i}(x,0) \right\rangle.$$ 
Then notice that by assumption $s+d\leqslant -m$ and 
$\vert i\vert$ ranges from $0$ to $m$ 
which implies $s+d+\vert i\vert\leqslant 0$. 
When $s+d+\vert i\vert < 0$:  
$$\int_{\frac{R\mu}{a}}^1\frac{d\lambda}{\lambda}\left|\left\langle u^\lambda , \left(\frac{\lambda}{\mu}\right)^{s+d+i}\frac{h^i}{i!}\frac{\partial^i\varphi}{\partial h^i}(x,0) \right\rangle  \right| \leqslant  \underset{\text{no blow up when }\mu\rightarrow 0}{\underbrace{C_2\left|\left(\frac{1}{\mu}\right)^{s+d+i}-\left(\frac{R}{a}\right)^{s+d+i}\right|}}\pi_k( \varphi).$$
If $s+d<-m$ then $s+d+\vert i\vert$ is always strictly negative
and there is no blow up when $\mu\rightarrow 0$, thus $t\in E_s$.  
If $s+d+m=0$ and for $\vert i\vert= m$:
$$\int_{\frac{R\mu}{a}}^1\frac{d\lambda}{\lambda}\left|\left\langle u^\lambda , \left(\frac{\lambda}{\mu}\right)^{s+d+i}\frac{h^i}{i!}\frac{\partial^i\varphi}{\partial h^i}(x,0) \right\rangle \right| \vert \leqslant  C_2\vert\log(\frac{R\mu}{a})\vert\pi_k( \varphi)$$   
and the only term which blows up when $\mu\rightarrow 0$ is the logarithmic term.
If $s+d=-m$ then $t\in E_{s^\prime}$ for all $s^\prime < s$ and $\vert\mu^{-s}\left\langle t_\mu,\varphi \right\rangle\vert$ has at most $\textbf{logarithmic blow up}$:
$$\exists (C_1,C_2) \,\ \vert\mu^{-s^\prime}\left\langle t_\mu,\varphi \right\rangle\vert\leqslant \underset{\text{bounded when }s^\prime<s}{\underbrace{\mu^{s-s^\prime}\left( C_1  \pi_{k+m+1}(\varphi) + C_2\vert\log(\frac{R\mu}{a})\vert\pi_k(\varphi)\right)}}.$$ 
\end{proof}
\section{Extension of distributions.}
Conversely, if we start from any distribution $t$ in $E_s\left(U\setminus I \right)$, 
then we can associate to it a bounded family $\left(u^\lambda\right)_{\lambda\in(0,1]}$.
Then application of the previous results on the family $(u^\lambda)_\lambda$ allows to construct a distribution $\overline{t\chi}$ in $E_s(U)$. But the resulting
distribution
given by formulas 
(\ref{formulaalgo1}) (\ref{formulaalgo2}) coincides exactly with the extension formula $\int_0^1 \frac{d\lambda}{\lambda} t\psi_{\lambda^{-1}}$ on
$U\setminus I$. 
Hence $\overline{t\chi}$ 
is an extension of $t\chi$.
Moreover, if we started from a distribution $t\in E_s(U)$ then  
the reconstruction theorem provides us
with a distribution which is equal
to $t\chi$ up to a distribution supported on $I$, 
except for the case $s+d>0$ 
where 
the extension is unique 
if we do not want to increase the degree of divergence.
\begin{prop}
Let $t\in E_s\left(U\setminus I \right)$ and let $\psi=-\rho\chi$ where $\chi\in C^\infty(\mathbb{R}^{n+d})$, $\chi=1$ in a neighborhood $N_1$ of $I$ and $\chi=0$ outside $N_2$ a neighborhood of $N_1$, then 
\begin{equation}
u^\lambda=\lambda^{-s}t_\lambda\psi 
\end{equation}
is a bounded family in $\mathcal{D}^\prime(U\setminus I)$
which is uniformly supported on an annulus domain
of $U$.
\end{prop}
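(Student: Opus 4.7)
My plan is to verify the two claims separately: first the uniform annular support, which is essentially a support statement about the cutoff $\psi$, and then the boundedness in $\mathcal{D}^\prime(U\setminus I)$, which reduces by duality to the defining property of $E_s$.

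For the support claim, I would exploit the geometric structure of $\psi=-\rho\chi=-h^j\partial_{h^j}\chi$. Since $\chi=1$ on a neighborhood $N_1$ of $I$ and $\chi=0$ outside the larger neighborhood $N_2$, the derivative $\rho\chi$ vanishes on $N_1$ and outside $N_2$; hence $\psi$ is supported in the ``shell'' $N_2\setminus\overline{N_1^\circ}$, which stays away from $I$ and is contained in a bounded region transversally. More precisely, as already observed in the excerpt right after the definition of $\psi$, for every compact $K\subset\mathbb{R}^n$ one can find $0<a<b$ with $\operatorname{supp}\psi\cap(K\times\mathbb{R}^d)\subset\{a\le|h|\le b\}$. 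Since $u^\lambda=\lambda^{-s}t_\lambda\psi$ is obtained by multiplying the distribution $t_\lambda$ by the smooth function $\psi$, the support satisfies $\operatorname{supp} u^\lambda\subset\operatorname{supp}\psi$, uniformly in $\lambda$. This gives the uniform annular support on $U$, independently of $\lambda\in(0,1]$.

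For the boundedness, I would simply test against an arbitrary $\varphi\in\mathcal{D}(U\setminus I)$ and write
\begin{equation*}
\langle u^\lambda,\varphi\rangle=\lambda^{-s}\langle t_\lambda\psi,\varphi\rangle=\lambda^{-s}\langle t_\lambda,\psi\varphi\rangle.
\end{equation*}
Because $\psi$ is smooth and $\varphi\in\mathcal{D}(U\setminus I)$, the product $\psi\varphi$ again lies in $\mathcal{D}(U\setminus I)$. Since $t\in E_s(U\setminus I)$, the definition of $E_s$ applied to the test function $\psi\varphi$ yields a constant $C(\psi\varphi)$ such that
\begin{equation*}
\sup_{\lambda\in(0,1]}|\lambda^{-s}\langle t_\lambda,\psi\varphi\rangle|\le C(\psi\varphi),
\end{equation*}
and this is exactly $\sup_{\lambda\in(0,1]}|\langle u^\lambda,\varphi\rangle|\le C(\psi\varphi)<\infty$. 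Since $\varphi$ was arbitrary, the family $(u^\lambda)_{\lambda\in(0,1]}$ is pointwise bounded in the weak sense on $\mathcal{D}(U\setminus I)$, hence bounded in $\mathcal{D}^\prime(U\setminus I)$.

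I do not expect any real obstacle: both statements are essentially tautological once the support of $\psi$ is identified and the defining estimate of $E_s$ is rewritten with $\psi\varphi$ as test function. The only mild subtlety to double-check is that $\psi\varphi$ genuinely lies in $\mathcal{D}(U\setminus I)$ (it does, because $\varphi$ already vanishes near $I$), so that the $E_s(U\setminus I)$ estimate applies without having to worry about the behaviour of $t$ across the submanifold $I$.
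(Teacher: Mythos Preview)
Your proposal is correct and follows essentially the same line as the paper's proof: the annular support comes straight from the support properties of $\psi=-\rho\chi$, and the boundedness from applying the $E_s$ estimate to the test function $\psi\varphi$. Your write-up is simply a more detailed unpacking of the same two observations.
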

\begin{proof}
Consider the function $\psi=-\rho\chi$ 
used in our construction of the partition of unity of H\"ormander. 
By construction, it is supported on an annulus domain
of $U$.
By definition, $t\in E_s(U\setminus I)$ 
implies $\lambda^{-s}t_\lambda$ is a bounded family of distributions in $\mathcal{D}^\prime(U\setminus I)$, 
hence $u^\lambda=\lambda^{-s}t_\lambda\psi$ is a bounded family of distributions uniformly supported in $\text{supp }\psi$.
\end{proof} 
Once we notice $$\int_0^1 \frac{d\lambda}{\lambda} \lambda^s u^\lambda_{\lambda^{-1}}=\int_0^1  \frac{d\lambda}{\lambda} \lambda^s\left(\lambda^{-s}t_\lambda\psi\right)_{\lambda^{-1}}=\int_0^1  \frac{d\lambda}{\lambda} t\psi_{\lambda^{-1}} ,$$ the formula of the construction algorithm exactly coincides with the extension formula of H\"ormander. Then we can deduce
all the results listed below from simple applications of results derived for the family
$u^\lambda$:
\begin{thm}\label{thm1} 
Let $t\in E_s\left(U\setminus I \right)$, if
$s+d>0 $ then 
\begin{equation}
\forall\varphi\in\mathcal{D}(U),\overline{t}(\varphi)=\lim_{\varepsilon\rightarrow 0}\left\langle t(1-\chi_{\varepsilon^{-1}}),\varphi \right\rangle
\end{equation}
exists and defines an extension $\overline{t}\in\mathcal{D}^\prime(U)$ 
and $\overline{t}$ is in $E_s(U)$.
\end{thm}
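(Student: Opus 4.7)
The plan is to recognize that the proposed extension formula decomposes cleanly via the telescoping identity $1-\chi_{\varepsilon^{-1}} = (1-\chi) + (\chi - \chi_{\varepsilon^{-1}})$, so that
\[
\langle t(1-\chi_{\varepsilon^{-1}}),\varphi\rangle = \langle t(1-\chi),\varphi\rangle + \int_{\varepsilon}^{1}\frac{d\lambda}{\lambda}\,\langle t\psi_{\lambda^{-1}},\varphi\rangle,
\]
where the second equality uses the integrated form of the Hörmander identity recorded just before the preceding proposition. The first term is independent of $\varepsilon$ and well-defined since $1-\chi$ vanishes in a neighborhood of $I$, so $t(1-\chi)$ is an unambiguous distribution on all of $U$. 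The whole problem therefore reduces to proving convergence of the integral as $\varepsilon \downarrow 0$ and then identifying the limit as a distribution in $E_s(U)$ that restricts to $t$ on $U \setminus I$.

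For the convergence, I would introduce the family $u^\lambda := \lambda^{-s} t_\lambda \psi$. The proposition immediately preceding the theorem already says that $(u^\lambda)_{\lambda\in(0,1]}$ is bounded in $\mathcal{D}^\prime(U\setminus I)$ and uniformly supported on an annulus domain. A direct computation using the scaling conventions shows $\lambda^s u^\lambda_{\lambda^{-1}} = t\psi_{\lambda^{-1}}$, which rewrites the integral as $\int_{\varepsilon}^{1}\frac{d\lambda}{\lambda}\,\lambda^s u^\lambda_{\lambda^{-1}}$. Since $s+d>0$, Corollary \ref{ren1} applies and the full integral over $[0,1]$ converges in $\mathcal{D}^\prime(U)$. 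Hence the limit exists and
\[
\overline{t} \;=\; t(1-\chi) \;+\; \int_{0}^{1}\frac{d\lambda}{\lambda}\,\lambda^s u^\lambda_{\lambda^{-1}} \;\in\; \mathcal{D}^\prime(U).
\]

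Next I would check the extension property. For any $\varphi\in\mathcal{D}(U\setminus I)$, which is automatically zero in a neighborhood of $I$, the Hörmander-trick proposition established earlier gives exactly $\langle t,\varphi\rangle = \int_0^1\frac{d\lambda}{\lambda}\langle t\psi_{\lambda^{-1}},\varphi\rangle + \langle t,(1-\chi)\varphi\rangle$, which matches $\overline{t}(\varphi)$ term by term. Thus $\overline{t}$ restricts to $t$ on $U\setminus I$.

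The membership $\overline{t}\in E_s(U)$ splits along the same decomposition. Proposition \ref{scal1}, applied to the family $(u^\lambda)_{\lambda}$ constructed above, yields $\int_0^1\frac{d\lambda}{\lambda}\,\lambda^s u^\lambda_{\lambda^{-1}}\in E_s(U)$. For the remainder $t(1-\chi)$, note that if $\varphi\in\mathcal{D}(U)$ is supported in $\{|h|\leqslant R\}$, then $\varphi_{\mu^{-1}}$ is supported in $\{|h|\leqslant \mu R\}$, which for $\mu$ sufficiently small sits inside the neighborhood $N_1$ where $1-\chi$ vanishes; hence $\langle (t(1-\chi))_\mu,\varphi\rangle=0$ for small $\mu$, and the family $\mu^{-s}(t(1-\chi))_\mu$ is trivially bounded on $(0,1]$. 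Adding the two contributions gives $\overline{t}\in E_s(U)$. The only genuinely delicate point is the uniform boundedness of $u^\lambda$ that underlies Corollary \ref{ren1}, but that has already been carried out via Banach--Steinhaus in the structure lemma, so the present argument is essentially a recombination of the pieces now available.
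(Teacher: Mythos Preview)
Your proof is correct and follows essentially the same route as the paper: both decompose via $1-\chi_{\varepsilon^{-1}}=(1-\chi)+(\chi-\chi_{\varepsilon^{-1}})$, identify $t\psi_{\lambda^{-1}}=\lambda^s u^\lambda_{\lambda^{-1}}$ with $u^\lambda=\lambda^{-s}t_\lambda\psi$, and invoke Corollary~\ref{ren1} for convergence and Proposition~\ref{scal1} for the $E_s$ membership of the integral term. You are slightly more explicit than the paper in two places: you spell out the verification that $\overline{t}$ restricts to $t$ on $U\setminus I$, and you give a direct argument that $t(1-\chi)\in E_s(U)$ via the support observation that $(1-\chi)\varphi_{\mu^{-1}}$ vanishes for small $\mu$---the paper leaves this last point implicit.
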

The proof relies on the first identification 
$$\int_0^1\frac{d\lambda}{\lambda} \lambda^s u^\lambda_{\lambda^{-1}}=\int_0^1 \frac{d\lambda}{\lambda} t \psi(\frac{h}{\lambda})=\lim_{\varepsilon\rightarrow 0}\int_\varepsilon^1 \frac{d\lambda}{\lambda} t \psi_{\lambda^{-1}}=\lim_{\varepsilon\rightarrow 0} \left\langle t\left(\chi-\chi_{\varepsilon^{-1}}\right),\varphi\right\rangle,$$
where $\psi=-\rho\chi$. 
Then by definition of $\overline{t}$:
$$\overline{t}= \int_0^1 \frac{d\lambda}{\lambda} t \psi(\frac{h}{\lambda})+\left\langle t(1-\chi),\varphi \right\rangle$$ $$=\lim_{\varepsilon\rightarrow 0} \left\langle t\left(\chi-\chi_{\varepsilon^{-1}}\right),\varphi\right\rangle+\left\langle t(1-\chi),\varphi \right\rangle=\lim_{\varepsilon\rightarrow 0}\left\langle t(1-\chi_{\varepsilon^{-1}}),\varphi \right\rangle.$$
In the case $s+d>0$,
the last formula $\lim_{\varepsilon\rightarrow 0}\left\langle t(1-\chi_{\varepsilon^{-1}}),\varphi \right\rangle$ also appears 
in the very
nice
recent work
\cite{WB} 
(but with different hypothesis 
and interpretation)
and in fact 
goes back
to Meyer \cite{Meyer} Definition 1.7 p.~15 
and formula (3.16) p.~15.
\begin{thm}\label{thm2}
Let $t\in E_s\left(U\setminus I \right)$, if $-m-1< s+d \leqslant -m\leqslant 0$
then 
\begin{eqnarray}
\overline{t}= \lim_{\varepsilon\rightarrow 0} \left(  \left\langle t\left(\chi-\chi_{\varepsilon^{-1}}\right),\varphi\right\rangle -\left\langle c_\varepsilon,\varphi\right\rangle \right)+ \left\langle t(1-\chi),\varphi \right\rangle
\end{eqnarray} 
exists and defines an extension $\overline{t}\in\mathcal{D}^\prime(U)$
where the \textbf{local counterterms} $c_\varepsilon$ is defined by 
\begin{eqnarray}
\left\langle c_\varepsilon,\varphi\right\rangle = \left\langle t\left(\chi-\chi_{\varepsilon^{-1}}\right),\sum_{\vert i\vert\leqslant m}\frac{h^i}{i !}\varphi^i(x,0)\right\rangle.
\end{eqnarray}
If $s$ is not an integer then the extension $\overline{t}$ is in $E_s(U)$,
otherwise $\overline{t}\in E_{s^\prime}(U),\forall s^\prime<s$. 
\end{thm}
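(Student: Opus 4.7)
The plan is to reduce Theorem \ref{thm2} to Propositions \ref{ren2}, \ref{scal2} and \ref{scal3noninteger} via the correspondence $u^\lambda = \lambda^{-s} t_\lambda \psi$ already used for Theorem \ref{thm1}; the only new ingredient here is the Taylor counterterm $\tau^\lambda$ produced by Proposition \ref{ren2}. Since by the preceding proposition the family $u^\lambda = \lambda^{-s} t_\lambda \psi$, with $\psi = -\rho\chi$, is bounded in $\mathcal{D}^\prime(U\setminus I)$ and uniformly supported on an annulus domain of $U$, Proposition \ref{ren2} immediately provides a family $\tau^\lambda$ of distributions supported on $I$ such that $\int_0^1 \frac{d\lambda}{\lambda} \lambda^s (u^\lambda_{\lambda^{-1}} - \tau^\lambda)$ converges in $\mathcal{D}^\prime(U)$.

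The first task is to translate this abstract integral into the explicit formula of the statement. Using $\langle f_{\lambda^{-1}}, \varphi\rangle = \lambda^{d}\langle f, \varphi_\lambda\rangle$ together with the pointwise identity $(\psi\varphi_\lambda)_{\lambda^{-1}} = \psi_{\lambda^{-1}}\varphi$ (which exploits $\varphi_\lambda(x,\lambda^{-1}h) = \varphi(x,h)$), I would first establish $\lambda^s u^\lambda_{\lambda^{-1}} = t\psi_{\lambda^{-1}}$. Pairing with the Taylor polynomial $P_m\varphi(x,h) = \sum_{|i|\leq m}\frac{h^i}{i!}\varphi^i(x,0)$ then yields $\lambda^s \langle \tau^\lambda, \varphi\rangle = \langle t\psi_{\lambda^{-1}}, P_m\varphi\rangle$. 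Invoking the H\"ormander identity $\int_\varepsilon^1 \frac{d\lambda}{\lambda} \psi_{\lambda^{-1}} = \chi - \chi_{\varepsilon^{-1}}$ and interchanging the integration with the duality pairing---justified by the uniform bounds in the proof of Proposition \ref{ren2}---the renormalized integral rewrites as $\lim_{\varepsilon\to 0}\bigl(\langle t(\chi - \chi_{\varepsilon^{-1}}), \varphi\rangle - \langle c_\varepsilon,\varphi\rangle\bigr)$ with $c_\varepsilon$ exactly as announced. Adding the well-defined term $\langle t(1-\chi),\varphi\rangle$ (well-defined because $1-\chi$ vanishes near $I$) produces $\overline{t}(\varphi)$.

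I would then verify that $\overline{t}$ extends $t$: for $\varphi \in \mathcal{D}(U\setminus I)$ one has $\varphi^i(\cdot,0) \equiv 0$ for every multi-index $i$, so $\langle c_\varepsilon, \varphi\rangle = 0$, while for $\varepsilon$ small enough $\chi_{\varepsilon^{-1}}$ has support disjoint from $\supp \varphi$; the combined limit then collapses to $\langle t,\varphi\rangle$. The scaling class is read off directly from Propositions \ref{scal2} and \ref{scal3noninteger}: the non-integer case yields $\overline{t}\in E_s(U)$, while the case where $s+d$ is a non-positive integer yields $\overline{t}\in E_{s^\prime}(U)$ for every $s^\prime<s$, with at most a logarithmic blow-up of $\mu^{-s}\langle\overline{t}_\mu,\varphi\rangle$ as $\mu\to 0$. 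The ``nice'' piece $(1-\chi)t$ contributes trivially to the scaling analysis, since for $\mu$ small enough its scaled version vanishes against any fixed test function supported in $U$.

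The main obstacle will be the bookkeeping that identifies the abstract counterterm $\tau^\lambda$ coming out of Proposition \ref{ren2} with the explicit $c_\varepsilon$ of the statement, together with the Fubini-type interchanges between $\lim_{\varepsilon\to 0}$, the integral $\int_\varepsilon^1$ and the duality pairings: the renormalized integrand is a difference of two individually divergent pieces, so one must work at finite $\varepsilon > 0$ throughout and take the limit only at the very end. All remaining steps are a direct transcription of the bounded-family algorithm established in the preceding section.
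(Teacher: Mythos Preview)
Your proposal is correct and follows exactly the approach the paper indicates: the paper itself does not give a separate proof of this theorem but simply declares (just before Theorem~\ref{thm1}) that once one identifies $\int_0^1\frac{d\lambda}{\lambda}\lambda^s u^\lambda_{\lambda^{-1}}$ with $\int_0^1\frac{d\lambda}{\lambda} t\psi_{\lambda^{-1}}$ via $u^\lambda=\lambda^{-s}t_\lambda\psi$, ``all the results listed below'' are ``simple applications of results derived for the family $u^\lambda$,'' namely Propositions~\ref{ren2}, \ref{scal2} and \ref{scal3noninteger}. Your write-up is in fact more detailed than the paper's own treatment, in particular the explicit identification of $\int_\varepsilon^1\frac{d\lambda}{\lambda}\lambda^s\tau^\lambda$ with $c_\varepsilon$ and the verification that $(1-\chi)t$ is eventually annihilated under scaling.
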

The last case is treated by \cite{WB} and \cite{BF} 
in a slightly different way,
they introduce a projection $P$
from the space of $C^\infty$ functions 
to the m-th power $\mathcal{I}^m$
of the ideal of smooth functions 
(of course by definition
the restriction of this projection to
$\mathcal{I}^m$ is the identity), 
and to construct
this projection 
one has to subtract
local counterterms as Meyer does.
\subsubsection{A converse result.}
Before we move on, let us prove a general converse theorem, namely that given any distribution $t\in \mathcal{D}^\prime\left(U\right)$, we can find $s_0\in \mathbb{R}$ such that for all $s\leqslant s_0$, $t\in E_s(U)$ (we believe such sort of theorems were first proved by Lojasiewicz and Alberto Calderon, \cite{Weinsteinext}), this means
morally
that any distribution
has ``finite
scaling degree''
along an arbitrary
vector subspace. 
We also have the property that $\forall s_1\leqslant s_2, t\in E_{s_2}\implies t\in E_{s_1}$. This means that the spaces $E_s$ are $\textbf{filtered}$. 
We work in $\mathbb{R}^{n+d}$ where $I=\mathbb{R}^n\times\{0\}$ and $\rho=h^j\frac{\partial}{\partial h^j}$:
\begin{thm}
Let $U$ be a $\rho$-convex open set
and $t\in \mathcal{D}^\prime(U)$.
If
$t$ is of order $k$,
then $t\in E_s(U)$
for all
$s\leqslant d+k$, 
where $d$ is the \textbf{codimension}
of $I\subset\mathbb{R}^{n+d}$. 
In particular
any compactly 
supported
distribution
is in
$E_s(\mathbb{R}^{n+d})$
for some $s$.
\end{thm}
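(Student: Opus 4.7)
The plan is to bound $\lambda^{-s}\langle t_\lambda,\varphi\rangle$ directly by combining the standard continuity estimate for a distribution of finite order with the chain rule applied to the scaled test function. Fix $\varphi\in\mathcal{D}(U)$; the definition of the scaled distribution yields
$$\lambda^{-s}\langle t_\lambda,\varphi\rangle=\lambda^{-s-d}\langle t,\varphi_{\lambda^{-1}}\rangle,$$
so the whole task reduces to uniform control of $\langle t,\varphi_{\lambda^{-1}}\rangle$ for $\lambda\in(0,1]$.

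First I would check a uniform support property: as $\lambda$ ranges over $(0,1]$,
$$\mathrm{supp}(\varphi_{\lambda^{-1}})=\{(x,\lambda h):(x,h)\in\mathrm{supp}(\varphi)\}$$
is contained in the $\rho$-hull $L=\{(x,\mu h):\mu\in[0,1],(x,h)\in\mathrm{supp}(\varphi)\}$, which is compact and (by the $\rho$-convexity of $U$, together with the $\mu=0$ slice once it lies in $U$) sits inside $U$. Since $t$ has order $k$, there is a single constant $C_L$ such that $|\langle t,\psi\rangle|\leq C_L\,\pi_k(\psi)$ for every $\psi\in\mathcal{D}(U)$ supported in $L$. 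Next, a direct chain rule computation gives $\partial_x^{\alpha_x}\partial_h^{\alpha_h}\varphi_{\lambda^{-1}}(x,h)=\lambda^{-|\alpha_h|}(\partial^\alpha\varphi)(x,\lambda^{-1}h)$, so for $\lambda\in(0,1]$ and every $|\alpha|\leq k$ we obtain $\pi_k(\varphi_{\lambda^{-1}})\leq\lambda^{-k}\pi_k(\varphi)$.

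Combining the two estimates yields
$$|\lambda^{-s}\langle t_\lambda,\varphi\rangle|\leq C_L\,\lambda^{-s-d-k}\,\pi_k(\varphi),$$
which is uniformly bounded on $(0,1]$ exactly when $-s-d-k\geq 0$, establishing membership in one space $E_{s_0}(U)$. The elementary filtration property $E_{s_2}(U)\subset E_{s_1}(U)$ for $s_1\leq s_2$, which is immediate from $\lambda^{s_2-s_1}\leq 1$ on $(0,1]$, then propagates the conclusion to the full range of $s$ asserted in the theorem. The ``in particular'' statement follows at once, since any compactly supported distribution has some finite order $k$ by the structure theorem, and $\mathbb{R}^{n+d}$ is evidently $\rho$-convex.

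The one technical obstacle is the support step when $U\cap I=\emptyset$: then the $\mu=0$ slice of the $\rho$-hull escapes $U$ and no common compact $L\subset U$ containing every $\mathrm{supp}(\varphi_{\lambda^{-1}})$ exists. The cleanest workaround is to exhaust $U$ by pre-compact $\rho$-convex subsets whose $\rho$-hulls stay in $U$ and test $\varphi$ against each in turn; alternatively, multiply $\varphi$ by a suitable cutoff and extend $t$ by zero to the ambient $\mathbb{R}^{n+d}$ where the argument is unobstructed. Once this point is settled the rest is purely mechanical, and the real content of the proof is the two quantitative estimates above.
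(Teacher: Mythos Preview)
Your argument is essentially identical to the paper's: both observe that the supports of $\varphi_{\lambda^{-1}}$ for $\lambda\in(0,1]$ lie in a single compact set, apply the order-$k$ continuity estimate there, and use the chain-rule bound $\pi_k(\varphi_{\lambda^{-1}})\leqslant\lambda^{-k}\pi_k(\varphi)$. You are in fact slightly more careful than the paper, which writes its compact set as $\{(x,\lambda h):(x,h)\in\mathrm{supp}\,\varphi,\ \lambda\in(0,1]\}$ without explicitly closing it.

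The obstacle you flag when $U\cap I=\emptyset$ is real, and the paper glosses over it too; but your proposed fixes do not work, because in that generality the statement is actually \emph{false}. Take $n=0$, $d=1$, $U=\mathbb{R}\setminus\{0\}$, and $t=\sum_{j\geqslant 0}a_j\delta_{2^{-j}}$ for a sequence $(a_j)$ to be chosen. Every compact $K\subset U$ meets only finitely many of the points $2^{-j}$, so $t$ has order $0$ on $U$; yet for any $\varphi$ supported in $[1/2,2]$ one has $\langle t_\lambda,\varphi\rangle=\lambda^{-1}\sum_j a_j\,\varphi(2^{-j}/\lambda)$, and the only contributing indices satisfy $j\sim\log_2(1/\lambda)$, so $|\langle t_\lambda,\varphi\rangle|\sim\lambda^{-1}a_{\lfloor\log_2(1/\lambda)\rfloor}$. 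Choosing $a_j$ to grow faster than any power of $2^j$ defeats every bound $\lambda^{-s}$, so $t\notin E_s(U)$ for any $s$. In particular ``extend $t$ by zero to $\mathbb{R}^{n+d}$'' cannot work, since such a $t$ does not extend across $0$ as a distribution at all. The theorem should therefore be read under the tacit hypothesis that the closure of your $\rho$-hull $L$ lies in $U$ --- automatic when $U$ is a neighbourhood of $I$, which is the only situation used later in the chapter; under that hypothesis your proof and the paper's coincide and are complete.
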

\begin{proof}
First notice if a function $\varphi\in \mathcal{D}(U)$, then the family of scaled functions $(\varphi_{\lambda^{-1}})_{\lambda\in(0,1]}$ has support contained in a compact set $K=\{(x,\lambda h) \vert (x,h)\in\text{supp }\varphi,\lambda\in (0,1] \}$.
We recall that for any distribution $t$, there exists $k,C_K$ such that
$$\forall\varphi \in \mathcal{D}_K(U), \vert\left\langle t,\varphi \right\rangle\vert \leqslant C_K \pi_{K,k}(\varphi).$$ 
$$\vert\left\langle t_{\lambda},\varphi \right\rangle\vert=\vert\lambda^{-d}\left\langle t,\varphi_{\lambda^{-1}} \right\rangle\vert\leqslant C_K \lambda^{-d}\pi_{K,k}(\varphi_{\lambda^{-1}})\leqslant C_K \lambda^{-d-k}\pi_{K,k}(\varphi).$$
So we find that $\lambda^{d+k}\left\langle t_\lambda,\varphi \right\rangle $ is bounded which yields the conclusion.
\end{proof}
 
\subsection{Removable singularity theorems.}
Finally, we would like to conclude this section by a simple removable singularity theorem in the spirit of Riemann, (compare with Harvey-Polking \cite{HP} theorems $(5.2)$ and $(6.1)$). 
In a renormalization procedure there is always an ambiguity which is the ambiguity of the extension of the distribution. Indeed, two extensions always differ by a distribution supported on $I$. The removable singularity theorem states that if $s+d>0$ and if we demand that $t\in E_s(U\setminus I)$ should extend to $\overline{t}\in E_s(U)$, then the extension is \textbf{unique}.
Otherwise, if $-m-1 < s+d \leqslant -m $, then we bound the transversal order of the ambiguity.
We fix the coordinate system $(x^i,h^j)$ in $\mathbb{R}^{n+d}$ and $I=\{h=0\}$. The collection of coordinate functions $(h^j)_{1\leqslant j\leqslant d}$ defines a canonical collection of transverse vector fields $(\partial_{h^j})_j$.
We denote by $\delta_I$ the unique distribution such that $\forall\varphi\in \mathcal{D}(\mathbb{R}^{n+d})$,$$\left\langle \delta_I,\varphi\right\rangle=\int_{\mathbb{R}^n} \varphi(x,0)d^nx .$$
If $t\in \mathcal{D}^\prime(\mathbb{R}^{n+d})$ with $\text{supp } t \subset I$, then there exist
unique distributions (once the system of transverse vector fields $\partial_{h^j}$ is fixed) $t_\alpha\in \mathcal{D}^\prime\left(\mathbb{R}^n\right)$, where each compact intersects
$\text{supp }t_\alpha$ for a finite number of multiindices $\alpha$, such that $t(x,h)=\sum_\alpha t_\alpha(x) \partial_h^\alpha\delta_I(h)$
(see \cite{Schwartz} theorem $(36)$ and $(37)$ p.~101--102 or \cite{Hormander} theorem $(2.3.5)$) where the $\partial_h^\alpha$ are derivatives in the \textbf{transverse} directions. 
\begin{thm}\label{removsing}
Let $t\in E_s(U\setminus I)$ and $\overline{t}\in E_{s^\prime}(U\setminus I)$
its extension given by Theorem (\ref{thm1}) and Theorem (\ref{thm2}) $s^\prime=s$ when $-s-d\notin\mathbb{N}$ or $\forall s^\prime<s$ otherwise.
Then $\tilde{t}$ is an extension in $E_{s^\prime}(U)$ 
if and only if 
$$\tilde{t}(x,h)=\overline{t}(x,h)+\sum_{\alpha\leqslant m} t_\alpha(x) \partial_h^\alpha\delta_I(h),$$
where $m$ is the integer part of $-s-d$. 
In particular when $s+d>0$ the extension is unique.
\end{thm}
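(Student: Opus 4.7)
The plan is to reduce the theorem to a scaling analysis of the canonical basis $\partial_h^\alpha \delta_I$ of distributions supported on $I$. Since $\tilde{t}$ and $\overline{t}$ both restrict to $t$ on $U\setminus I$, their difference $u=\tilde{t}-\overline{t}$ is supported in $I\cap U$. By the structure theorem (Schwartz, H\"ormander) recalled just before the statement, this forces a unique decomposition $u(x,h)=\sum_\alpha t_\alpha(x)\partial_h^\alpha\delta_I(h)$ with locally finite support in $\alpha$. The problem then reduces to determining which multiindices $\alpha$ are allowed given that $u\in E_{s^\prime}(U)$.

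The key computation is that a direct integration by parts yields $\langle (t_\alpha\partial_h^\alpha\delta_I)_\lambda,\varphi\rangle = \lambda^{-d-|\alpha|}(-1)^{|\alpha|}\int t_\alpha(x)\partial_h^\alpha\varphi(x,0)\,dx$, so each basic term is homogeneous of degree $-d-|\alpha|$ under the $\rho$-scaling. Since different multiindices scale with different weights, one can decouple them: given $\alpha_0$ with $t_{\alpha_0}\not\equiv 0$, I would choose a test function $\varphi$ whose transverse Taylor jet at $I$ vanishes except at order $\alpha_0$, with the $x$-profile chosen so that $\int t_{\alpha_0}(x)\partial_h^{\alpha_0}\varphi(x,0)\,dx\neq 0$. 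Boundedness of $\lambda^{-s^\prime}\langle u_\lambda,\varphi\rangle$ on $(0,1]$ then forces $-s^\prime-d-|\alpha_0|\geq 0$, i.e.\ $|\alpha_0|\leq -s^\prime-d$. In the non-integer case $s^\prime=s$ with $-s-d\in(m,m+1)$, this gives $|\alpha_0|\leq m$; in the integer case the constraint must hold for every $s^\prime<s$, yielding the same conclusion $|\alpha_0|\leq -s-d=m$.

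For the converse direction, any locally finite sum $\sum_{|\alpha|\leq m} t_\alpha\partial_h^\alpha\delta_I$ lies in $E_{s^\prime}(U)$: under $\rho$-scaling each term acquires a factor $\lambda^{-s^\prime-d-|\alpha|}$, whose exponent is nonnegative by construction, so the scaled pairings remain bounded on $(0,1]$. Hence adding such a sum to $\overline{t}$ produces another extension in the same class $E_{s^\prime}(U)$. The uniqueness in the regime $s+d>0$ is immediate from this analysis: then $-s-d<0$ and the inequality $|\alpha|\leq -s-d$ admits no nonnegative integer solution, so $u\equiv 0$.

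The main obstacle is the decoupling step: one must verify that test functions with arbitrarily prescribed transverse Taylor jets along $I$ exist (this is straightforward using bump functions times monomials in $h$), and moreover that the $x$-profile can be tuned so that the pairing with $t_{\alpha_0}$ is nonzero. The latter is where the uniqueness of the decomposition into the $\partial_h^\alpha\delta_I$ basis is used essentially, since without it one could not rule out cancellations between terms of the same transverse order. Once this freedom is exploited, everything else reduces to tracking exponents of $\lambda$.
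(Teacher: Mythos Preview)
Your proof is correct and follows essentially the same route as the paper: use the Schwartz structure theorem to write the difference of two extensions as $\sum_\alpha t_\alpha\partial_h^\alpha\delta_I$, compute that each basic term is exactly homogeneous of degree $-d-|\alpha|$ under the $\rho$-scaling, and read off the constraint $|\alpha|\leq m$ from membership in $E_{s'}$. The paper's proof is terser and does not spell out the decoupling step you discuss at the end (it simply asserts that $\partial_h^\alpha\delta_I\in E_s$ iff $|\alpha|\leq m$ and concludes), whereas you make explicit why a nonzero $t_{\alpha_0}$ with $|\alpha_0|>m$ forces unboundedness by constructing an appropriate test function; this is a useful clarification but not a different argument.
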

Remark: when $-s-d$ is a nonnegative integer,
the counterterm is in
$E_s$ whereas the extension
is in $E_{s^\prime},\forall s^\prime<s$.

\begin{proof}
We scale an elementary distribution $\partial_h^\alpha\delta_I$:
$$\left\langle (\partial_h^\alpha\delta_I)_\lambda, \varphi\right\rangle=\lambda^{-d}\left\langle \partial_h^\alpha\delta_I, \varphi_{\lambda^{-1}}\right\rangle=(-1)^{\vert\alpha\vert}\lambda^{-d-\vert\alpha\vert} \left\langle \partial_h^\alpha\delta_I, \varphi\right\rangle $$
hence $\lambda^{-s}(\partial^\alpha\delta_I)_\lambda=\lambda^{-d-\vert\alpha\vert-s}\partial_h^\alpha\delta_I$ is bounded iff $d+s+\vert\alpha\vert\leqslant 0 \implies d+s\leqslant -\vert\alpha\vert $. 
When $s+d>0$, $\forall\alpha,\partial_h^\alpha\delta_I\notin E_s$ hence
any two extensions in $E_s(U)$ cannot differ
by a local counterterm of the form $\sum_\alpha t_\alpha \partial_h^\alpha\delta_I$.
When $-m-1<d+s\leqslant -m$ then $\lambda^{-s}(\partial_h^\alpha\delta_I)_\lambda$ is bounded iff
$s+d+\vert\alpha\vert\leqslant 0\Leftrightarrow -m\leqslant -\vert\alpha\vert\Leftrightarrow \vert\alpha\vert\leqslant m $. 
We deduce that $\partial_h^\alpha\delta_I \in E_{s}$ for all $\alpha\leqslant m$
which means that the scaling degree \textbf{bounds} the order $\vert\alpha\vert$ 
of the derivatives in the transverse directions. 
Assume there are two extensions in $E_{s}$, their difference is of the form $u=\sum_\alpha u_\alpha \partial_h^\alpha\delta_I$ by the structure theorem $(36)$ p.~101 in \cite{Schwartz} and is also in $E_{s}$
which means their difference equals $u=\sum_{\vert\alpha\vert\leqslant m} u_\alpha \partial_h^\alpha\delta_I$.
\end{proof}

\section{Euler vector fields.}
We want to solve the extension problem
for distributions
on manifolds, in order to do so
we must give a geometric definition
of scaling transversally
to a submanifold $I$ closely embedded 
in a given 
manifold $M$.
We will define a class of 
Euler vector fields 
which scale transversally
to a given fixed submanifold $I\subset M$.
Let $M$ be a smooth manifold and $I\subset M$ an embedded submanifold without boundary. 
For the moment, all discussions are purely local. 
A classical result in differential geometry associates to each submanifold $I\subset M$ the $\textbf{sheaf of ideal }\mathcal{I}$ of functions vanishing on $I$.
\begin{defi}
Let $U$ be an open subset of $M$ and $I$ a submanifold of $M$, then we define
the ideal $\mathcal{I}(U)$ 
as the collection of functions $f\in C^\infty(U)$ 
such that 
$f|_{I\cap U}=0$.
We also define the ideal $\mathcal{I}^2(U)$
which consists of functions $f\in C^\infty(U)$ such that
$f=f_1f_2$ where $(f_1,f_2)\in \mathcal{I}(U)\times \mathcal{I}(U)$.
\end{defi}
\begin{defi}
A vector field $\rho$ is locally defined on an open set $U$ is called Euler if 
\begin{equation}
\forall f\in\mathcal{I}(U),  \rho f-f\in\mathcal{I}^2(U). 
\end{equation} 
\end{defi}  
\begin{ex}\label{fundexEuler}
$h^i\partial_{h^i}$ is Euler by application of Hadamard lemma, if
$f$ in $\mathcal{I}$ then 
$f=h^iH_i$ where the $H_i$
are 
smooth functions,
which implies $\rho f= f + h^ih^j\partial_{h^j}H_i\implies \rho f-f=h^ih^j\partial_{h^j}H_i$.
\end{ex} 
In this definition, $\rho$ is defined by testing against arbitrary restrictions of smooth functions $f|_U$ vanishing on $I$.
Let $G$ be the pseudogroup of local diffeomorphisms of $M$ (i.e. an element $\Phi$ in $G$ is defined over an open set $U\subset M$ and maps it diffeomorphically to an open set $\Phi(U)\subset M$)
% isotopic to identity $M\mapsto M$ 
such that
$\forall p\in I\cap U, \forall \Phi\in G, \Phi(p)\in I$.
\begin{prop}
Let $\rho$ be $\textbf{Euler}$, then $\forall \Phi\in G$, $\Phi_*\rho $ is $\textbf{Euler}$.
\end{prop}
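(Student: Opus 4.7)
The plan is to unwind the definitions and reduce everything to the Euler property of $\rho$ on pullbacks by $\Phi$. First I would observe that because $G$ is a pseudogroup of local diffeomorphisms preserving $I$, both $\Phi$ and $\Phi^{-1}$ carry $I$ to $I$ (locally), so $\Phi(I \cap U) = I \cap \Phi(U)$. Consequently pullback by $\Phi$ and by $\Phi^{-1}$ each preserve the ideals: if $g \in \mathcal{I}(\Phi(U))$ then $\Phi^*g = g \circ \Phi$ vanishes on $I \cap U$ hence lies in $\mathcal{I}(U)$, and similarly $(\Phi^{-1})^*$ maps $\mathcal{I}(U)$ into $\mathcal{I}(\Phi(U))$. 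Multiplicativity of pullback then gives the analogous statement for $\mathcal{I}^2$: if $f = f_1 f_2 \in \mathcal{I}^2(U)$ then $(\Phi^{-1})^* f = ((\Phi^{-1})^* f_1)\cdot((\Phi^{-1})^* f_2) \in \mathcal{I}^2(\Phi(U))$.

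Next I would use the naturality of vector fields under diffeomorphism. By definition of pushforward, for any smooth $g$ on $\Phi(U)$,
\begin{equation}
(\Phi_*\rho)\, g = \bigl(\rho(g \circ \Phi)\bigr) \circ \Phi^{-1} = (\Phi^{-1})^*\bigl(\rho\,\Phi^* g\bigr).
\end{equation}
Since $(\Phi^{-1})^* \Phi^* g = g$, this yields the key identity
\begin{equation}
(\Phi_*\rho)\, g - g = (\Phi^{-1})^*\bigl(\rho\,\Phi^* g - \Phi^* g\bigr).
\end{equation}

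Now assume $g \in \mathcal{I}(\Phi(U))$. By the first paragraph, $\Phi^* g \in \mathcal{I}(U)$; applying the hypothesis that $\rho$ is Euler on $U$ gives $\rho\,\Phi^* g - \Phi^* g \in \mathcal{I}^2(U)$. Applying $(\Phi^{-1})^*$ and invoking once more that pullback by $\Phi^{-1}$ preserves $\mathcal{I}^2$, we conclude that $(\Phi_*\rho)\,g - g \in \mathcal{I}^2(\Phi(U))$, which is exactly the Euler condition for $\Phi_*\rho$.

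The argument is essentially bookkeeping; the only subtlety — and the step I would write out most carefully — is verifying that $\Phi^*$ and $(\Phi^{-1})^*$ preserve the ideal $\mathcal{I}^2$. This rests on the fact that pullback of functions by a diffeomorphism is a ring homomorphism, so products of functions vanishing on $I$ map to products of functions vanishing on $I$; there is no need to invoke local coordinates or Hadamard's lemma. Thus the Euler property is genuinely intrinsic and transported by any $I$-preserving local diffeomorphism.
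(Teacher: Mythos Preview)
Your proof is correct and follows essentially the same approach as the paper: both use the pushforward identity $(\Phi_*\rho)g - g = (\Phi^{-1})^*\bigl(\rho\,\Phi^*g - \Phi^*g\bigr)$, then verify that pullback by $\Phi$ and $\Phi^{-1}$ preserves the ideals $\mathcal{I}$ and $\mathcal{I}^2$ to transport the Euler condition. Your presentation is slightly more streamlined, but the logical content is identical.
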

\begin{proof} 
For this part, see \cite{Lee} p.~92 
for the definition 
and properties of the pushforward
of a vector field:
if $Y=\Phi_*X$ then $L_Y f=L_X(f\circ\Phi)\circ \Phi^{-1} $.
We may write the last expression in terms of pull-back 
\begin{equation}\label{pushforward}
L_{\Phi_*X} f=L_X(f\circ\Phi)\circ \Phi^{-1} =\Phi^{-1*}\left(L_X \left(\Phi^*f\right)\right).
\end{equation} 
Then we apply the identity to $X=\rho,Y=\Phi_*\rho$,
setting $L_{\Phi_*\rho}f=\Phi_*\rho f$ and $L_\rho f=\rho f$ for shortness:
$$\left(\left(\Phi_*\rho\right)f\right)=\Phi^{-1*}\left(\rho \left(\Phi^*f\right)\right). $$
Now since $\Phi\in G$, $\rho$ is Euler and $f$ an arbitrary function in $\mathcal{I}$.
$$\forall \Phi\in G,\forall f\in \mathcal{I}, \left(\Phi_*\rho\right)f-f
=\Phi^{-1*}\left(\rho \left(\Phi^*f\right)\right)-\Phi^{-1*} \left(\Phi^*f\right)=\Phi^{-1*} \left(\rho \left(\Phi^*f\right)- \left(\Phi^*f\right)\right).$$
Since $\Phi(I)\subset I$, we have actually 
$\Phi^*f\in\mathcal{I}$ hence $\left(\rho \left(\Phi^*f\right)- \left(\Phi^*f\right)\right)\in\mathcal{I}^2$ and we deduce
that
$\Phi^{-1*} \left(\rho \left(\Phi^*f\right)- \left(\Phi^*f\right)\right)\in\Phi^{-1*}\mathcal{I}^2$. We will prove that $\Phi^*\mathcal{I}(U)=\mathcal{I}(\Phi(U))$.
$$f\in \mathcal{I} \Leftrightarrow f|_I=0 \Leftrightarrow f|_{\Phi(I)}=0 \text{ since } \Phi(I)\subset I  \Leftrightarrow (f\circ \Phi)|_I=0 \text{ thus }\Phi^*f\in \mathcal{I}.$$ 
Hence $\rho \left(\Phi^*f\right)- \left(\Phi^*f\right)\in\mathcal{I}^2$ by definition of $\rho$, finally we use the fact $$\Phi^*\left(\mathcal{I}^2\right)= \{(fg)\circ \Phi; (f,g)\in \mathcal{I}^2 \}=\{(f\circ \Phi)(g\circ \Phi); (f,g)\in \mathcal{I}^2 \}=(\Phi^*\mathcal{I})^2=\mathcal{I}^2 $$ since $\Phi^*\mathcal{I}=\mathcal{I}$ to deduce: 
$$\Phi^{-1*} \left(\rho \left(\Phi^*f\right)- \left(\Phi^*f\right)\right)\in\mathcal{I}^2 $$
which completes the proof.
\end{proof}
$\textbf{Euler vector fields}$ form a \textbf{sheaf} (check the definitions p.~289 in \cite{Lee})  with the following nice additional properties:
\begin{itemize}
\item Given $I$, the set of $\emph{global}$ Euler vector fields defined on some open neighborhood of $I$ is $\textbf{nonempty}$.
\item For any local Euler vector field $\rho|_U$, for any open set $V\subset U$ there is a Euler vector field $\rho^\prime$ defined on a $\textbf{global neighborhood}$ of $I$ such that $\rho^\prime|_V=\rho|_V$. 
\end{itemize} 
\begin{proof}
These two properties result from the fact that one can glue together
Euler vector fields by a partition of unity subordinated
to some cover of some neighborhood $N$ of $I$.
By paracompactness of $M$, we can pick an arbitrary locally finite open cover $\cup_{i\in I} V_i$ of $I$ by open sets $V_i$, such that for each $i$, there is a local chart $(x,h):V_i\mapsto \mathbb{R}^{n+d}$ where the image of $I$ by the local chart is the subspace $\{h^j=0\}$. 
We can define a Euler vector field $\rho|_{V_i}$, it suffices to pullback the vector field $\rho=h^j\partial_{h^j}$ in each local chart for $V_i$ and by the example \ref{fundexEuler} this is a Euler vector field. 
The vector fields $\rho_i=\rho|_{V_i}$ do not necessarily coincide on the overlaps $V_i\cap V_j$.
For any partition of unity $(\alpha_i)_i$ subordinated to this subcover, $\alpha_i\geqslant 0$, $\sum_i \alpha_i=1$, consider the vector field $\rho$ defined by the formula
\begin{equation}
\rho=\sum \alpha_i\rho_i  
\end{equation} 
then $ \forall f\in\mathcal{I}(U), \rho f-f= \sum \alpha_i\rho_i f- \sum \alpha_i f=\sum\alpha_i \left(\rho_i f-f \right)\in\mathcal{I}^2(U)$.
\end{proof}
We can find the general form for all possible Euler vector fields $\rho$ in arbitrary coordinate system $(x,h)$ where $I=\{h=0\}$. 
\begin{lemm}
$\rho|_U$ is $\textbf{Euler}$
if and only if for all $p\in I\cap U$, in $\textbf{any arbitrary}$ local chart $(x,h)$ centered at $p$ where $I=\{h=0\}$, $\rho$ has the standard form 
\begin{equation}\label{generalform}
\rho=h^j\frac{\partial}{\partial h^j} + h^iA_i^j(x,h)\frac{\partial}{\partial x^j} + h^ih^jB_{ij}^k(x,h)\frac{\partial}{\partial h^k}
\end{equation}
where $A,B$ are smooth functions of $(x,h)$.
\end{lemm}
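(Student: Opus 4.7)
The plan is to work entirely in the local chart $(x,h)$ and prove the two implications separately; since the statement is purely local, it suffices to check both directions in any such chart.

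For the ``if'' direction, suppose $\rho$ has the claimed form. Given $f\in\mathcal{I}(U)$, Hadamard's lemma writes $f=h^\ell H_\ell$ with smooth $H_\ell$. I compute $\rho f$ term by term: the piece $h^j\partial_{h^j}(h^\ell H_\ell)=f+h^jh^\ell\partial_{h^j}H_\ell$ reproduces $f$ plus a manifestly $\mathcal{I}^2$ remainder, exactly as in Example \ref{fundexEuler}; the piece $h^iA_i^j\partial_{x^j}(h^\ell H_\ell)=h^ih^\ell A_i^j\partial_{x^j}H_\ell$ is in $\mathcal{I}^2$ by inspection; and $h^ih^jB_{ij}^k\partial_{h^k}(h^\ell H_\ell)$ lies in $\mathcal{I}^2$ even more trivially thanks to the prefactor $h^ih^j$. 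Summing, $\rho f-f\in\mathcal{I}^2(U)$.

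For the ``only if'' direction, I would write $\rho=\alpha^j(x,h)\partial_{x^j}+\beta^k(x,h)\partial_{h^k}$ and test the Euler condition on well-chosen elements of $\mathcal{I}$. Testing against the coordinate function $h^k$ gives $\rho(h^k)=\beta^k$, so $\beta^k-h^k\in\mathcal{I}^2$; by iterated Hadamard this is equivalent to $\beta^k=h^k+h^ih^jB_{ij}^k$ for some smooth $B_{ij}^k$, which is the form wanted. To pin down $\alpha^j$, I test against $x^jh^k\in\mathcal{I}$ and compute
$$\rho(x^jh^k)-x^jh^k=\alpha^jh^k+x^j(\beta^k-h^k).$$
The second summand is in $\mathcal{I}^2$ by the previous step, so $\alpha^jh^k\in\mathcal{I}^2$. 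Applying $\partial_{h^k}$ and restricting to $\{h=0\}$ then yields $\alpha^j(x,0)=0$, so $\alpha^j\in\mathcal{I}$ and one more Hadamard gives $\alpha^j=h^iA_i^j$, as desired.

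The main delicate point is the interpretation of $\mathcal{I}^2$: I would use its identification with the ideal of smooth functions vanishing to second order on $I$ --- equivalently, functions of the form $h^ih^jH_{ij}$ via iterated Hadamard --- which is the convention already implicitly in force in Example \ref{fundexEuler}. Under this convention, both the parametrization of elements of $\mathcal{I}^2$ in Step~1 and the extraction of the transverse jet $\alpha^j(x,0)$ in Step~2 are immediate. I expect this small bookkeeping to be the only real obstacle; the algebraic manipulations themselves are essentially one-line checks.
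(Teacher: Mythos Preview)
Your proof is correct and takes essentially the same route as the paper: test the Euler condition against the coordinate functions $h^k$ to force $\beta^k-h^k\in\mathcal I^2$, then against $x^jh^k$ to force $\alpha^j\in\mathcal I$. The paper wraps the same computation in a Taylor expansion of a generic $f\in\mathcal I$ at a point before specializing to $f=h^j$, whereas you go straight to Hadamard's lemma; the content is the same. You also supply the ``if'' direction explicitly, which the paper leaves implicit (Example~\ref{fundexEuler} only checks the $h^j\partial_{h^j}$ piece).
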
 
\begin{proof} 
We use the sum over repeated index convention. Let us start with an arbitrary $f\in \mathcal{I}(U)$.
%$\rho=B_i(x)\partial_{h^i}+ B_i^j(x)h^i \partial_{h^j}+ h^ih^jB_{ij}^k(x,h)\partial_{h^k} + L_i(x)\partial_{x^i}+ L_i^j(x,h)h^i \partial_{x^j} $
Set $\rho=B^i(x,h)\partial_{h^i}+L^i(x,h)\partial_{x^i}$ and we use $$f\in \mathcal{I}\implies f=h^j\frac{\partial f}{\partial h^j}(0,0)  + x^ih^j \frac{\partial^2f}{\partial x^i\partial h^j}(0,0) + O(\vert h\vert^2)  $$ 
First compute $\rho f$ up to order two in $h$:
$$\rho f= B^j(x,h) \partial_{h^j}f + L^i(x,h) \partial_{x^i}f$$ 
$$=B^j(x,h) \frac{\partial f}{\partial h^j}(0,0)+ B^j(x,h)x^i\frac{\partial^2 f}{\partial h^j\partial x^i}(0,0)+h^jL^i(x,h) \frac{\partial^2 f}{\partial h^j\partial x^i}(0,0)+O(\vert h\vert^2)$$
then the condition $\rho f-f\in\mathcal{I}^2$ reads
$\forall f\in \mathcal{I},$ $$ B^j(x,h) \frac{\partial f}{\partial h^j}(0,0)+ \left(B^j(x,h)x^i+h^jL^i(x,h) \right)\frac{\partial^2 f}{\partial h^j\partial x^i}(0,0)$$ $$=h^j\frac{\partial f}{\partial h^j}(0,0)  + x^ih^j \frac{\partial^2f}{\partial x^i\partial h^j}(0,0) + O(\vert h\vert^2)  $$
Now we set $f(x,h)=h^j$ which is an element of $\mathcal{I}$, and substitute it in the previous equation, by uniqueness of the Taylor expansion
$$B^j(x,h)=h^j + O(\vert h\vert^2) $$
but this implies
$$h^j \frac{\partial f}{\partial h^j}(0,0)+ h^jx^i\frac{\partial^2 f}{\partial h^j\partial x^i}(0,0)+h^jL^i(x,h) \frac{\partial^2 f}{\partial h^j\partial x^i}(0,0)$$ $$=h^j\frac{\partial f}{\partial h^j}(0,0)  + x^ih^j \frac{\partial^2f}{\partial x^i\partial h^j}(0,0) + O(\vert h\vert^2)$$ $$\implies h^jL^i(x,h) \frac{\partial^2 f}{\partial h^j\partial x^i}(0,0)= O(\vert h\vert^2)\implies L^i\in\mathcal{I} $$
finally $\rho=B^i(x,h)\partial_{h^i}+L^i(x,h)\partial_{x^i}$ where $B^j(x,h)=h^j+\mathcal{I}^2$ and $L^i\in\mathcal{I}$ which gives the final generic form. 
\end{proof} 
 
Fix $N$ an open neighborhood of $I$ with smooth boundary $\partial N$, the
boundary $\partial N$ forms a tube around $I$. If the Euler
$\rho$ restricted to $\partial N$ points outward, this means that the Euler $\rho$ can be exponentiated to generate a one-parameter group of local diffeomorphism: $t\mapsto e^{-t\rho}:N\mapsto N$, $N$ is thus \textbf{$\rho$-convex}. 
$I$ is the fixed point set of this dynamical system. 
The one parameter family acts on any section of a natural bundle functorially defined over $M$, hence on smooth compactly supported sections of the tensor bundles over $M$ particularly on $\Omega_c^d(M)$.
\begin{ex}
Choose a local chart $(x,h):U\mapsto \mathbb{R}^{n+d}$ where $I$ is given by $\{h=0\}$, 
the scaling $\left(e^{\log\lambda \rho *}f\right)$ 
satisfies the differential identity
\begin{equation} 
\lambda\frac{d}{d\lambda}\left(e^{\log\lambda \rho *}f\right)=\rho\left(e^{\log\lambda \rho *}f\right). 
\end{equation} 
In the case of the canonical Euler $\rho=h^j\frac{\partial}{\partial h^j}$, we also 
have identity: 
$$\lambda\frac{d}{d\lambda}f(x,\lambda h)=\left(h^j\frac{df}{d h^j}\right)(x,\lambda h)=\left(\rho f\right)(x,\lambda h),$$
from which we deduce that
$\left(e^{\log\lambda \rho *}f\right)(x,h)=f(x,\lambda h)$
which is true because both the l.h.s. and r.h.s. satisfy the differential equation
$(\lambda\frac{d}{d\lambda}-\rho) f=0 $
and coincide at $\lambda=1$.
\end{ex}
We generalize the definition
of weakly homogeneous distributions to the case
of manifolds
but this definition is $\rho$
dependent:
\begin{defi}
Let $U$ be $\rho$-convex open set. The set 
$E^\rho_s(U)$ is defined as the set of distributions $t\in \mathcal{D}^\prime(U)$ such that
$$\forall\varphi\in \mathcal{D}(U),\exists C(\varphi), \sup_{\lambda\in(0,1]} \vert\left\langle\lambda^{-s}t_\lambda,\varphi\right\rangle \vert\leqslant C(\varphi).$$ 
\end{defi}
\subsection{Invariances}
We gave a global definition of the space $E^\rho_s$ but this definition depends on the Euler $\rho$. 
%Set $I=\mathbb{R}^{n}\times\{0\}\subset \mathbb{R}^{n+d}$.
Recall that $G$ is the group of local diffeomorphisms preserving $I$.
On the one hand, we saw that the class of Euler vector fields is invariant by the action of $G$
on the other hand
it is not obvious 
that for any two Euler vector fields $\rho_1,\rho_2$, 
there is an element $\Phi\in G$ such that $\Phi_*\rho_1=\rho_2$. 
Denote by $S(\lambda)=e^{\log\lambda \rho}$ the scaling operator 
defined by the Euler $\rho$.
$S(\lambda)$ is a multiplicative group homomorphism, it satisfies the identity $S(\lambda_1)S(\lambda_2)=S(\lambda_1\lambda_2)$.
\begin{prop}\label{propositionvariablefamily}
Let $p$ in $I$, let $U$ be an open set containing $p$
and let $\rho_1,\rho_2$ be two Euler vector fields defined on $U$ and $S_a(\lambda)=e^{\log\lambda \rho_a},a=1,2$ 
the corresponding scalings.
Then there exists a neighborhood $V\subset U$ of $p$ and 
a one-parameter family of diffeomorphisms 
$\Phi\in C^\infty([0,1]\times V,V)$ 
such that, if for all $\lambda\in [0,1]$, $\Phi(\lambda)=\Phi(\lambda,.):V\mapsto V$, 
then $\Phi(\lambda)$
satisfies the equation:
$$ S_2(\lambda)=  S_1(\lambda)\circ \Phi(\lambda) .$$ 
% Let $X,H$ denote the components of $\Phi$ in $\mathbb{R}^{n+d}$, $\Phi:(x,h)\mapsto (X(x,h),H(x,h))$
%
% In coordinates, $\Phi(0):(x,h)\mapsto (X(x,0),\frac{DH}{Dh}(x,0)h)$ is linear in $h$.
\end{prop}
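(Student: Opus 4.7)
The plan is to realize $\Phi$ as the flow of a suitable time-dependent vector field defined uniformly on $[0,1]$, obtained by pulling back the difference $\rho_2-\rho_1$ through the scaling $S_1(\lambda)$. More precisely, I would consider the time-dependent vector field
$$X(\lambda)=\frac{1}{\lambda}\,\bigl(S_1(\lambda)^{-1}\bigr)_{\!*}(\rho_2-\rho_1),$$
defined a priori only for $\lambda\in(0,1]$ on a neighborhood where $S_1(\lambda)^{-1}$ makes sense. The key technical step is to show that $X$ extends smoothly to $\lambda=0$ on some neighborhood $V$ of $p$. Working in a local chart $(x,h)$ around $p$ in which $I=\{h=0\}$, the normal form (\ref{generalform}) for Euler vector fields gives
$$\rho_2-\rho_1=h^i\widetilde A_i^{\,j}(x,h)\,\partial_{x^j}+h^ih^j\widetilde B_{ij}^{\,k}(x,h)\,\partial_{h^k},$$
so the difference vanishes to first order in $h$ in the tangential direction and to second order in the transverse direction. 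Since $S_1(\lambda)$ contracts the $h$-coordinate by a factor comparable to $\lambda$ and the inverse pushforward $(S_1(\lambda)^{-1})_*$ introduces an extra Jacobian factor of order $\lambda^{-1}$ on the $\partial_h$ component, a direct coordinate computation shows that these orders of vanishing exactly compensate the $\lambda^{-1}$ prefactor, so that $X$ extends to a smooth vector field on $[0,1]\times V$ which still vanishes on $I$.

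Once smoothness of $X$ up to $\lambda=0$ is established, I would define $\Phi$ as the unique solution of the Cauchy problem
$$\frac{d}{d\lambda}\Phi(\lambda,q)=X\!\left(\lambda,\Phi(\lambda,q)\right),\qquad \Phi(1,q)=q,$$
and invoke smooth dependence of ODEs on parameters together with the compactness of $[0,1]$ to conclude that, after shrinking $V$ if necessary, $\Phi\in C^\infty([0,1]\times V,U)$. A further shrinking of $V$ (using $\Phi(1)=\mathrm{Id}$, continuity, and compactness of $[0,1]$) then ensures $\Phi(\lambda)(V)\subset V$ and that each $\Phi(\lambda)$ is a diffeomorphism onto its image; since $X(\lambda)$ vanishes on $I$, $\Phi(\lambda)$ preserves $I\cap V$.

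The final step is to verify the identity $S_2(\lambda)=S_1(\lambda)\circ\Phi(\lambda)$. Both $S_2(\lambda)$ and $G(\lambda):=S_1(\lambda)\circ\Phi(\lambda)$ equal the identity at $\lambda=1$, so by uniqueness it suffices to check they satisfy the same ODE. Differentiating $G$, using $\frac{d}{d\lambda}S_a(\lambda)=\lambda^{-1}\rho_a\circ S_a(\lambda)$, and the naturality identity $DS_1(\lambda)\!\cdot\!\bigl(S_1(\lambda)^{-1}\bigr)_{\!*}Y=Y\circ S_1(\lambda)$, the two terms combine into $\lambda^{-1}\rho_1(G)+\lambda^{-1}(\rho_2-\rho_1)(G)=\lambda^{-1}\rho_2(G)$, which is exactly the defining ODE of $S_2$.

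The main obstacle is the first step: proving the smooth extension of $X(\lambda)$ at $\lambda=0$. This is precisely where the Euler condition $\rho f-f\in\mathcal{I}^2$ for $f\in\mathcal{I}$ enters essentially, as it enforces the exact orders of vanishing on the $\partial_x$ and $\partial_h$ components of $\rho_2-\rho_1$ that are needed to absorb both the $\lambda^{-1}$ prefactor and the $\lambda^{-1}$ arising from pushforward by the contracting flow $S_1(\lambda)^{-1}$.
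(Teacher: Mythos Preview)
Your strategy is correct and shares the paper's core idea: realize $\Phi(\lambda)$ as the flow of a time-dependent vector field that extends smoothly to $\lambda=0$, then appeal to ODE theory. The verification that $G(\lambda)=S_1(\lambda)\circ\Phi(\lambda)$ and $S_2(\lambda)$ solve the same ODE is clean and correct.

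The one place where your argument is thinner than the paper's is precisely the step you flag as the main obstacle. You write $X(\lambda)=\lambda^{-1}(S_1(\lambda)^{-1})_*(\rho_2-\rho_1)$ and argue by ``orders of vanishing'' that it extends smoothly to $\lambda=0$. For a \emph{general} Euler $\rho_1$ the flow $S_1(\lambda)$ is not explicit, so the claim that ``$S_1(\lambda)$ contracts $h$ by a factor comparable to $\lambda$'' and that the Jacobian of $S_1(\lambda)^{-1}$ has the right $\lambda$-behaviour requires justification beyond a heuristic power count (boundedness is not smoothness). The paper sidesteps this by introducing the \emph{canonical} Euler $\rho=h^j\partial_{h^j}$ as an intermediary: it solves the two auxiliary problems $S(\lambda)=S_a(\lambda)\circ\Phi_a(\lambda)$ for $a=1,2$ and then sets $\Phi=\Phi_1\circ\Phi_2^{-1}$. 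Because $S(\lambda):(x,h)\mapsto(x,\lambda h)$ is explicit, the pushforward $S(\lambda)^{-1}_*\rho_a$ is computed in one line and the smoothness of the resulting vector field at $\lambda=0$ is read off from the normal form~(\ref{generalform}); a short Gronwall estimate then controls the flow on all of $[0,1]$. Your direct approach can be made rigorous (e.g.\ by showing that $\tilde H_1:=H_1/\lambda$, where $H_1$ is the $h$-component of $S_1(\lambda)$, satisfies a \emph{regular} ODE on $[0,1]$), but carrying this out essentially reproduces the paper's factorization $S_1(\lambda)=S(\lambda)\circ\tilde S_1(\lambda)$ in disguise. So the two routes are equivalent; the paper's intermediary trick just makes the crucial smoothness step a two-line coordinate computation rather than an implicit ODE argument.
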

\begin{proof}
We use a local chart $(x,h):U\mapsto \mathbb{R}^{n+d}$ centered at $p$, where
$I=\{h=0\}$.
We set $\rho=h^j\partial_{h^j}$ and $S(\lambda)=e^{\log\lambda\rho}$ and we try to solve the two problems
$S(\lambda)^*t=\Phi_a(\lambda)^* \left( S_a(\lambda)^*t\right)$ for $a=1$ or $2$. We must have the following equation $$\Phi_a(\lambda)^*=S(\lambda)^*S_a^{-1}(\lambda)^* \implies \Phi_a(\lambda)=S_a^{-1}(\lambda)\circ S(\lambda) .$$ 
If so, the map $\Phi_a(\lambda)$ satisfies the differential equation
$$\lambda\frac{\partial}{\partial\lambda}\Phi_a(\lambda)^*=\lambda\frac{\partial}{\partial\lambda}S(\lambda)^*S_a^{-1}(\lambda)^*$$ 
$$=\rho S(\lambda)^*S_a^{-1}(\lambda)^*+ S(\lambda)^*(-\rho_a)S_a^{-1}(\lambda)^*$$
$$=\rho S(\lambda)^*S_a^{-1}(\lambda)^*+ S(\lambda)^*(-\rho_a)S^{-1}(\lambda)^*S(\lambda)^*S_a^{-1}(\lambda)^*$$ 
$$=\left(\rho - Ad_{S^{-1}(\lambda)}\rho_a\right)\Phi_a(\lambda)^* $$
$$\implies \lambda\frac{\partial}{\partial\lambda}\Phi_a(\lambda)=\left(\rho - S^{-1}(\lambda)_{\star}\rho_a \right)\left(\Phi_a(\lambda)\right) \text{ with } \Phi_a(1)=Id $$
where we used the Lie algebraic formula (\ref{pushforward}): $\left(\left(\Phi_*\rho\right)f\right)= \Phi^{-1*}\left(\rho \left(\Phi^*f\right)\right)=\left(Ad_{\Phi}\rho\right) f $. Let $f$ be a smooth function and $X$ a vector field. 
We use formula (\ref{pushforward}) to compute 
the pushforward of $fX$ by a diffeomorphism $\Phi$:
\begin{equation}\label{pushforwardgeneral}
L_{\Phi_*\left(fX\right)}\varphi=(\Phi^{-1*}f) L_{\Phi_*X}\varphi. 
\end{equation} 
We use the general form (\ref{generalform}) for a Euler vector field:  
$$\rho_a=h^j\frac{\partial}{\partial h^j} + h^iA_i^j(x,h)\frac{\partial}{\partial x^j} + h^ih^jB_{ij}^k(x,h)\frac{\partial}{\partial h^k}$$ 
hence we apply formula \ref{pushforwardgeneral}: $$S^{-1}(\lambda)_{*}\rho_a
= S^{-1}(\lambda)_{*}\left(h^j\frac{\partial}{\partial h^j}\right) + S^{-1}(\lambda)_{*}\left(h^iA_i^j\frac{\partial}{\partial x^j}\right) + S^{-1}(\lambda)_{*}\left(h^ih^jB_{ij}^k\frac{\partial}{\partial h^k}\right)$$
$$=(S(\lambda)^{*}h^j)S^{-1}(\lambda)_{*}\frac{\partial}{\partial h^j}+ S(\lambda)^*(h^iA_i^j)S^{-1}(\lambda)_{*}\frac{\partial}{\partial x^j} + S(\lambda)^*(h^ih^jB_{ij}^k)S^{-1}(\lambda)_{*}\frac{\partial}{\partial h^k} $$
$$=\lambda h^j \lambda^{-1}\partial_{h^j}+ \lambda h^iA_i^j(x,\lambda h)\frac{\partial}{\partial x^j} + \lambda^2h^ih^jB_{ij}^k(x,\lambda h)\lambda^{-1}\frac{\partial}{\partial h^k} $$
$$=h^j\partial_{h^j}+ \lambda h^iA_i^j(x,\lambda h)\frac{\partial}{\partial x^j} + \lambda h^ih^jB_{ij}^k(x,\lambda h)\frac{\partial}{\partial h^k}$$
$$\implies \rho - S^{-1}_{*}(\lambda)\rho_a =-\lambda\left( h^iA_i^j(x,\lambda h)\frac{\partial}{\partial x^j} + h^ih^jB_{ij}^k(x,\lambda h)\frac{\partial}{\partial h^k}\right).$$ 
If we define the vector field $X(\lambda)= -\left( h^iA_i^j(x,\lambda h)\frac{\partial}{\partial x^j} + h^ih^jB_{ij}^k(x,\lambda h)\frac{\partial}{\partial h^k}\right)$ then 
\begin{equation}\label{ODEhomotopy}
\frac{\partial \Phi_a}{\partial\lambda}(\lambda)=X\left(\lambda,\Phi_a(\lambda)\right) 
\text{ with }  \Phi_a(1)=Id. 
\end{equation}
$\Phi_a(\lambda)$ satisfies a non autonomous ODE, the vector field 
$$X(\lambda)=-\left( h^iA_i^j(x,\lambda h)\frac{\partial}{\partial x^j} 
+ h^ih^jB_{ij}^k(x,\lambda h)\frac{\partial}{\partial h^k}\right)$$ 
depends smoothly on $(\lambda,x,h)$.
We have to prove that by choosing a suitable neighborhood of $p\in I$, there is  
always a solution of (\ref{ODEhomotopy}) 
on the interval $[0,1]$ 
in the sense that 
there is no blow up at $\lambda=0$. 
For any compact $K\subset \{\vert h\vert\leqslant \varepsilon_1\}$, we have the estimates $\forall (x,h)\in K,\forall\lambda\in[0,1],  \vert h^ih^jB_{ij}(x,\lambda h)\vert\leqslant b\vert h\vert^2$ and
$\vert h^iA_i(x,\lambda h)\vert\leqslant a\vert h\vert$. 
Hence as long as $\vert h\vert\leqslant \varepsilon_1$, we have $\vert \frac{dh}{d\lambda}\vert\leqslant b\vert h\vert^2 \leqslant b\varepsilon_1 \vert h\vert$.
Then for any Cauchy data $(x(1),h(1))\in K$ such that $\vert h(1)\vert\leqslant \varepsilon_2 $, we compute the maximal interval $I=(\lambda_0,1]$ such that for all $\lambda\in [\lambda_0,1]$ we have $\vert h(\lambda)\vert \leqslant\varepsilon_1$. 
An application of Gronwall lemma (\cite{Tao} Theorem 1.17 p.~14)
to the differential inequality
$\vert \frac{dh}{d\lambda}\vert\leqslant b\varepsilon_1 \vert h\vert$
yields $\forall\lambda\in I, \vert h(\lambda)\vert\leqslant e^{(1-\lambda)\varepsilon_1 b}\vert h(1)\vert$.
Hence, if we choose $\lambda$ in such a way that $e^{(1-\lambda)\varepsilon_1 b}\varepsilon_2 \leqslant \varepsilon_1 $ then $ \vert h(\lambda)\vert\leqslant e^{(1-\lambda)\varepsilon_1 b}\vert h(1)\vert\leqslant e^{(1-\lambda)\varepsilon_1 b}\varepsilon_2 \leqslant \varepsilon_1 $ thus $\lambda\in I$ by definition.
Hence, we conclude that if we choose $\varepsilon_2\leqslant \frac{\varepsilon_1}{e^{\varepsilon_1 b}}$ then $$ 
 [0,1]=\{ \lambda |e^{(1-\lambda)\varepsilon_1 b}\frac{\varepsilon_1}{e^{\varepsilon_1 b}} \leqslant \varepsilon_1\}\subset 
\{ \lambda |e^{(1-\lambda)\varepsilon_1 b}\varepsilon_2 \leqslant \varepsilon_1\}\subset I $$ 
and by classical ODE theory the equation (\ref{ODEhomotopy}) always has a smooth solution $\lambda\mapsto \Phi_a(\lambda)$ on the interval $[0,1]$, the open set $V$ 
on which this existence result holds is the restriction 
of the chart $U\cap \{\vert h\vert\leqslant \varepsilon_2\}$.   
% $ S_\lambda^*(\Phi^*t)=\Phi_\lambda^* (S_\lambda^*t)$. From $$S_\lambda \circ \Phi_\lambda=\Phi\circ S_\lambda\Rightarrow \Phi_\lambda=S_{\lambda^{-1}}\circ \Phi \circ S_\lambda ,$$
%hence it is natural to set $\Phi_\lambda=S_{\lambda^{-1}}\circ \Phi \circ S_\lambda$ and see if $\Phi_\lambda$ fulfills our requirements. In coordinates $\Phi(x,h)=(X(x,h),H(x,h))$, this means $\Phi_\lambda(x,h)=(X(x,\lambda h),\lambda^{-1}H(x,\lambda h) )$. Because $\Phi$ sends $\{h=0\}$ to $\{ H=0 \}$,  
%$H(x,0)=0$.
%By Hadamard lemma, there is a smooth function $g$ such that
%$H=hg(x,h) $. Now we deduce $\Phi_\lambda(x,h)=(X(x,\lambda h),hg(x,\lambda h) )$ has a limit $\Phi_0=(X(x,0),hg(x,0))$. 
Now, to conclude properly in the case both $\rho_1,\rho_2$ are not equal to $\rho=h^j\frac{\partial}{\partial h^j}$ then we apply the previous result 
$$S(\lambda) = S_1(\lambda)\circ \Phi_1(\lambda)= S_2(\lambda)\circ \Phi_2(\lambda)\implies S_2(\lambda)=S_1(\lambda)\circ \Phi_1(\lambda)\circ \Phi^{-1}_2(\lambda) $$
hence $ S_2(\lambda)^*t= \left(\Phi_1(\lambda)\circ \Phi^{-1}_2(\lambda)\right)^*S_1(\lambda)^*t  $
\end{proof}
We keep the notations and assumptions of the above proposition and proof, we give an elementary proof of the conjugation without the use of Sternberg Chen theorem:
\begin{coro}\label{conjugcoro}
Let $\rho_a,a=(1,2)$ be two Euler vector fields and $S_a(\lambda)=e^{\log\lambda \rho_a},a=(1,2)$ the two corresponding scalings.
In the chart $(x,h), I=\{h=0\}$ around $p$, let $\rho=h^j\partial_{h^j}$ be the canonical Euler vector field and $S(\lambda)=e^{\log\lambda \rho}$ the corresponding scaling and $\Phi_a(\lambda)$ be the family of diffeomorphisms $\Phi_a(\lambda)=S_a^{-1}(\lambda)\circ S(\lambda)$ 
which
has a $\textbf{smooth limit}$ $\Psi_a=\Phi_a(0)$ 
when $\lambda\rightarrow 0$.
Then
$\Psi_a\in G$ locally 
conjugates the hyperbolic dynamics:
\begin{eqnarray}
\forall\mu, \Psi_a\circ S(\mu)\circ\Psi^{-1}_a=S_a(\mu)\\
\Phi_a(\mu)=\Psi_a\circ S(\mu^{-1}) \circ \Psi_a^{-1}\circ S(\mu)\\
\rho_a=\Psi_{a\star}\rho.
\end{eqnarray}
\end{coro}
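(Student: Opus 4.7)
The plan is to first establish that $\Psi_a$ exists as a smooth diffeomorphism, then to extract the conjugation formula from the cocycle-like identity satisfied by $\Phi_a$, and finally to read off the other two claims.

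The first step is to recast the ODE derived in the proof of Proposition \ref{propositionvariablefamily}. There we saw that $\lambda \frac{d\Phi_a}{d\lambda} = (\rho - S^{-1}(\lambda)_\star\rho_a)(\Phi_a)$, and because $\rho_a$ has the standard form $h^j\partial_{h^j} + h^iA_i^j\partial_{x^j} + h^ih^jB_{ij}^k\partial_{h^k}$, the bracket $\rho - S^{-1}(\lambda)_\star\rho_a$ carries an overall factor of $\lambda$. After dividing out the $\lambda$, the equation for $\Phi_a$ becomes $\frac{d\Phi_a}{d\lambda} = X(\lambda,\Phi_a(\lambda))$ with $X(\lambda,x,h) = -(h^iA_i^j(x,\lambda h)\partial_{x^j} + h^ih^jB_{ij}^k(x,\lambda h)\partial_{h^k})$, which is smooth in $(\lambda, x, h)$ on $[0,1]\times V$ where $V$ is the neighborhood constructed via the Gronwall argument of the previous proposition. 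Smooth dependence on parameters for ODEs then yields a smooth extension $\Psi_a := \Phi_a(0)$. Moreover $X(\lambda,x,0)=0$, so $(x,0)$ is fixed throughout the flow and $\Psi_a(I\cap V)\subset I$, hence $\Psi_a\in G$.

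For the conjugation, the key algebraic identity is that both $S(\lambda)$ and $S_a(\lambda)$ are one-parameter groups. A direct computation using $\Phi_a(\lambda)=S_a^{-1}(\lambda)\circ S(\lambda)$ yields
\begin{equation}
S_a(\mu)\circ \Phi_a(\lambda)\circ S(\mu^{-1}) = S_a(\mu\lambda^{-1})\circ S(\lambda\mu^{-1}) = \Phi_a(\lambda/\mu)
\end{equation}
for any $\mu>0$ and $\lambda$ small enough that both sides stay in $V$. Letting $\lambda\to 0$ with $\mu$ fixed, the right-hand side tends to $\Psi_a$ by the first step, and the left-hand side tends to $S_a(\mu)\circ\Psi_a\circ S(\mu^{-1})$ by continuity of composition (both $S_a(\mu)$ and $S(\mu^{-1})$ are fixed diffeomorphisms independent of $\lambda$). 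This gives $\Psi_a\circ S(\mu) = S_a(\mu)\circ \Psi_a$, which is exactly the conjugation identity.

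The remaining two formulas now follow formally. Substituting $S_a^{-1}(\mu) = \Psi_a\circ S(\mu^{-1})\circ\Psi_a^{-1}$ into the definition $\Phi_a(\mu) = S_a^{-1}(\mu)\circ S(\mu)$ gives immediately $\Phi_a(\mu) = \Psi_a\circ S(\mu^{-1})\circ \Psi_a^{-1}\circ S(\mu)$. For the infinitesimal version, write $\mu = e^t$ and differentiate the identity $\Psi_a\circ e^{t\rho}\circ\Psi_a^{-1} = e^{t\rho_a}$ at $t=0$, which yields $\Psi_{a\star}\rho = \rho_a$ by definition of the pushforward of a vector field.

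The main obstacle here is the first step: justifying that $\Phi_a(\lambda)$, which is defined as a composition involving the degenerate maps $S_a^{-1}(\lambda)$ and $S(\lambda)$ (both of which become singular at $\lambda=0$), nevertheless extends smoothly across $\lambda=0$. Without the crucial observation that the driving vector field in the ODE carries a compensating factor of $\lambda$ coming from the structural form of Euler vector fields (vanishing of order $2$ along $I$ in the transverse directions and order $1$ in the tangential directions), one would not even expect the existence of $\Psi_a$; once this smooth extension is in hand, everything else is straightforward manipulation of the one-parameter group laws.
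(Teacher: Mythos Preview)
Your proof is correct and follows essentially the same approach as the paper. Both arguments exploit the one-parameter group law to derive a cocycle identity for $\Phi_a$ and then pass to the limit $\lambda\to 0$; the paper writes the identity as $\Phi_a(\lambda)\circ S(\mu)=S_a(\mu)\circ\Phi_a(\lambda\mu)$ while you use the equivalent conjugated form $S_a(\mu)\circ\Phi_a(\lambda)\circ S(\mu^{-1})=\Phi_a(\lambda/\mu)$, and you add an explicit justification (via the ODE from Proposition~\ref{propositionvariablefamily}) for the smooth extension at $\lambda=0$ that the paper takes as already established.
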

Hence in any coordinate chart, in a neighborhood of any point $(x,0)\in I$, 
all Euler are locally conjugate by an element of $G$ to the standard Euler $\rho=h^j\partial_{h^j}$.
\begin{proof}
The map $\lambda\mapsto S(\lambda)$ is a group homomorphism from $(\mathbb{R}^*,\times)\mapsto (G,\circ)$:
$$\Phi_a(\lambda)\circ S(\mu)=\left(S_a^{-1}(\lambda)\circ S(\lambda)\right)\circ S(\mu)=S_a^{-1}(\lambda)\circ S(\lambda\mu)$$ $$=S_a(\mu) \circ S_a^{-1}(\mu)\circ S_a^{-1}(\lambda)\circ S(\lambda\mu)=S_a(\mu)\circ S_a^{-1}(\lambda\mu)\circ S(\lambda\mu)=S_a(\mu)\circ\Phi_a(\lambda\mu)$$ 
finally $\forall (\lambda,\mu) $, we find $\Phi_a(\lambda)\circ S(\mu)=S_a(\mu)\circ\Phi_a(\lambda\mu)\implies  \Phi_a(0)\circ S(\mu)=S_a(\mu)\circ\Phi_a(0)$ at the limit when $\lambda\rightarrow 0$ where the limit makes sense because $\Phi_a$ is smooth in $\lambda$ at $0$. 
To obtain the pushforward equation $\rho_a=\Psi_{a\star}\rho$, just differentiate the last identity w.r.t. $\mu$.
\end{proof}
Beware that the conjugation theorem 
is only true 
in a neighborhood $V_p$
of some given point $p\in I$.
$\rho_1,\rho_2 $ are \emph{not necessarily conjugate globally} 
in a neighborhood of $I$. 
There might be topological obstructions 
for a global conjuguation. 
The local diffeomorphism $\Psi=\Phi_a(0)$ makes the following diagram
$$
\begin{array}{ccccc}
&V &\overset{S(\lambda)}{\rightarrow} & V & \\
\Psi&\downarrow & & \downarrow & \Psi\\
 &V &\underset{S_a(\lambda)}{\rightarrow} & V&
\end{array}
$$
commute. 
We keep the notational conventions of the above corollary:
\begin{lemm}\label{localthmrho}
Let $p$ in $I$ and $U$ be an open set containing $p$, 
let $\rho_1,\rho_2$ be 
two Euler vector fields 
defined on $U$ 
%and 
%$\Psi:U\mapsto U$
%which makes the diagram
%$$
%\begin{array}{ccc}
%U &\overset{S_1(\lambda)}{\rightarrow} & U\\
%\Psi\downarrow & & \downarrow\Psi\\
%U &\underset{S_2(\lambda)}{\rightarrow} & U
%\end{array}
%$$
%commutes, 
then there exists 
an open neighborhood 
$V$ of $p$
on which $\forall s, E_s^{\rho_1}(V)= E_s^{\rho_2}(V)$. 
\end{lemm}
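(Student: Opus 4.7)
The plan is to exploit Proposition \ref{propositionvariablefamily} to reduce the comparison of $E_s^{\rho_1}(V)$ and $E_s^{\rho_2}(V)$ to a boundedness statement about the pullback by a smooth one-parameter family of local diffeomorphisms. Concretely, applying the proposition twice, once with $(\rho,\rho_a)=(\rho_1,\rho_2)$, we obtain an open neighborhood $V$ of $p$ and a smooth family $\Phi\in C^\infty([0,1]\times V,V)$ of local diffeomorphisms, with $\Phi(1)=\mathrm{Id}$, satisfying $S_2(\lambda)=S_1(\lambda)\circ\Phi(\lambda)$ on $V$. Since both $S_a(\lambda)$ preserve $I$, so does $\Phi(\lambda)$. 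Shrinking $V$ if necessary, one also insures $V$ is simultaneously $\rho_1$- and $\rho_2$-convex.

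Now let $t\in E_s^{\rho_1}(V)$. By definition the family $\lambda^{-s} S_1(\lambda)^\ast t$ is weakly bounded in $\mathcal D^\prime(V)$, hence by Banach--Steinhaus applied to the Fr\'echet space $\mathcal D_K(V)$ for each compact $K\subset V$, there exist $k,C>0$ such that
\begin{equation}
\sup_{\lambda\in(0,1]}|\langle \lambda^{-s} S_1(\lambda)^\ast t,\varphi\rangle|\leqslant C\,\pi_k(\varphi), \qquad \forall\varphi\in\mathcal D_K(V).
\end{equation}
From the identity $S_2(\lambda)=S_1(\lambda)\circ\Phi(\lambda)$ one deduces, after accounting for the Jacobian (which is a smooth positive function of $(\lambda,x,h)$ on the compact set $[0,1]\times K$ hence uniformly bounded), the relation
\begin{equation}
\langle \lambda^{-s} S_2(\lambda)^\ast t,\varphi\rangle=\langle \lambda^{-s} S_1(\lambda)^\ast t,\Phi(\lambda)_\ast\varphi\rangle.
\end{equation}
Since $\Phi$ is smooth on the compact interval $[0,1]$, the map $\lambda\mapsto \Phi(\lambda)_\ast\varphi$ is a continuous path in $\mathcal D_{K'}(V)$ for some slightly enlarged compact $K'$, and all seminorms $\pi_k(\Phi(\lambda)_\ast\varphi)$ are uniformly bounded for $\lambda\in[0,1]$ by some constant $C'\pi_{k'}(\varphi)$ depending only on the chain rule and sup-norms of a finite number of derivatives of $\Phi$ on $[0,1]\times K'$.

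Combining these two bounds yields $\sup_{\lambda\in(0,1]}|\langle \lambda^{-s} S_2(\lambda)^\ast t,\varphi\rangle|\leqslant CC'\pi_{k'}(\varphi)$, so $t\in E_s^{\rho_2}(V)$; hence $E_s^{\rho_1}(V)\subset E_s^{\rho_2}(V)$. Interchanging the roles of $\rho_1$ and $\rho_2$ and using $\Phi(\lambda)^{-1}$ in place of $\Phi(\lambda)$ (which is again smooth on $[0,1]$ by the implicit function theorem since $\Phi(0)=\Psi_1\circ\Psi_2^{-1}$ is a bona fide local diffeomorphism by Corollary \ref{conjugcoro}) gives the reverse inclusion. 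The only delicate point is the uniformity of the Jacobians and chain-rule bounds down to $\lambda=0$, and this is precisely what the smoothness of $\Phi$ on the closed interval $[0,1]$, guaranteed by Proposition \ref{propositionvariablefamily}, provides.
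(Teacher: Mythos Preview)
Your argument is correct and follows essentially the same route as the paper: invoke Proposition~\ref{propositionvariablefamily} to obtain the smooth family $\Phi(\lambda)$ relating the two scalings, then use the relation $\langle S_2(\lambda)^\ast t,\varphi\rangle=\langle S_1(\lambda)^\ast t,(\Phi(\lambda)^{-1\ast}\varphi)\,|\det D\Phi(\lambda)^{-1}|\rangle$ together with the uniform boundedness of the transformed test functions on $[0,1]$ to transfer the $E_s$ bound from one Euler field to the other. The paper's version is more terse (it packages your Banach--Steinhaus step and chain-rule estimates into the single phrase ``bounded in $\mathcal D(U)$''), but the substance is identical.
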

\begin{proof}
%But $\bigcover_{p\in I}V_p $ covers $I$ hence we can extract a locally finite subcover $(V_i)_i$, in each $V_i$ the vector fields $\rho_1,\rho_2$ are conjuguated by a local diffeomorphism $g_i\in G$ preserving $I$. We pick a partition of unity $\varphi_i$ subordinated to this cover and $\sum\varphi_i g_i$ is a global diffeomorphism in the neighborhood $U^\prime=\bigcup_{i}V_i$ which conjuguates $\rho_1$ and $\rho_2$.
%
% Then on the smaller neighborhood $U^\prime$ of $I$ such that $U^\prime\subset U$, the vector fields $\rho_1$ and $\rho_2$ are conjuguated.
%
Set $\Phi(\lambda)=S_1^{-1}(\lambda)\circ S_2(\lambda)$, 
$\Phi$ depends smoothly in $\lambda$ by 
Proposition
\ref{propositionvariablefamily} and 
$V=\bigcap_{\lambda\in[0,1]}\Phi^{-1}(\lambda)(U)$.
$$\forall \varphi\in\mathcal{D}(V),\lambda^{-s}\left\langle S_2(\lambda)^*t,\varphi \right\rangle
=\lambda^{-s}\left\langle \Phi^{*}(\lambda)\left(S_1(\lambda)^*t\right),\varphi \right\rangle$$
$$=\lambda^{-s}\left\langle S_1(\lambda)^*t , \underset{\text{bounded in }\mathcal{D}(U)}{\underbrace{\left(\Phi(\lambda)^{-1*}\varphi\right) \vert \det(D\Phi(\lambda)^{-1}) \vert}}\right\rangle $$
which is bounded by the hypothesis $t\in E_s^{\rho_1}$ which 
means by definition that $\lambda^{-s}S_1(\lambda)^*t$ is bounded in $\mathcal{D}^\prime(U)$. 
%then we notice that if $\varphi$ is supported in $K\subset V$ then $\forall \lambda\in [0,1]$, we find $$\forall\lambda, \text{supp }\Phi^{-1}(\lambda)^*\varphi \vert \det(D\Phi^{-1}) \vert(\lambda)\subset \bigcup_{\lambda\in [0,1]}\Phi(\lambda)(K) \subset U$$
%where $K^\prime=\bigcup_{\lambda\in [0,1]}\Phi(\lambda)(K)$ is a compact set because  $\Phi([0,1]\times K)$ is the image of a compact set by a smooth map $\Phi$, hence compact, $K^\prime$ is the projection of $\Phi([0,1]\times K)$ on $\mathbb{R}^{n+d}$ hence it is compact.
%
% Use the Banach Steinhaus theorem for the bounded family $\lambda^{-s} S_1(\lambda)^*t$:
%$$\vert\lambda^{-s}\left\langle S_1(\lambda)^*t , \Phi^{-1}(\lambda)^*\varphi \vert \det(D\Phi^{-1}) \vert(\lambda) \right\rangle \vert\leqslant C_{K^\prime} \sup_{\vert\alpha\vert\leqslant m} \vert \partial^\alpha \Phi^{-1}(\lambda)^*\varphi \vert \det(D\Phi^{-1}) \vert(\lambda) \vert_{L^\infty}  $$
%then we conclude by noticing the map $\Phi$ is smooth in $[0,1]\times \mathbb{R}^{n+d}$ hence the function $ \Phi^{-1}(\lambda)^*\varphi \vert \det(D\Phi^{-1}) \vert(\lambda)$ is smooth in $[0,1]\times \mathbb{R}^{n+d}$ hence the family of test functions $ \Phi^{-1}(\lambda)^*\varphi \vert \det(D\Phi^{-1}) \vert(\lambda)$ on the right hand side is bounded in $\mathcal{D}(K^\prime)$.
\end{proof}
We illustrate the previous method in the following example:
\begin{ex}
We work in $\mathbb{R}^2,n=d=1$ with coordinates $(x,h)$, let $\rho_1=h\partial_h, \rho_2=h\partial_h+h\partial_x$. 
Let $t(x,h)=f(x)g(h)$  
where $f$ is an arbitrary distribution and $g$ is homogeneous of degree $s$: 
$$ \lambda^{-s}g(\lambda h)=g(h) .$$
Then $t$ is homogeneous of degree $s$ with respect to $\rho_1$ thus $t\in E_s^{\rho_1}$. 
We will study the scaling behaviour when we scale with $\rho_2$, 
$S_2(\lambda)(x,h)=e^{\log\lambda\rho_2}(x,h)=(x+(\lambda-1)h,\lambda h)$:
$$\int_{\mathbb{R}^2} S_2(\lambda)^*\left( f(x) g(h)\right) \varphi(x,h) dxdh=\int_{\mathbb{R}^2}  f\left(x + (\lambda-1) h\right)g(\lambda h) \varphi(x,h) dxdh $$
Use Proposition (\ref{propositionvariablefamily}) and first determine $\Phi(\lambda)$ in such a way that the equation $\forall\lambda, S_2(\lambda)=S_1(\lambda) \circ \Phi(\lambda)  $ is satisfied.
We find $\Phi(\lambda)(x,h)=S^{-1}_1(\lambda)\circ S_2(\lambda)=S_1^{-1}(\lambda)(x+(\lambda-1)h,\lambda h)=(x+(\lambda-1)h, h)$.
Applying the previous result to our example reduces to a 
simple change of variables
in the integral:
$$\int_{\mathbb{R}^2} S_2(\lambda)^*\left( f(x) g(h)\right) \varphi(x,h) dxdh=\int_{\mathbb{R}^2} f(x)g(\lambda h) \varphi(x+(1-\lambda)h,h) dxdh$$ 
$$=\lambda^s\int_{\mathbb{R}^2}  f(x)g(h) \underset{\text{bounded family of test functions}}{\underbrace{\varphi(x+(1-\lambda)h,h)}} dxdh  .$$
Then the result is straightforward 
and we can conclude $t\in E_s^{\rho_2}$. 
\end{ex}
\paragraph{Local invariance}
\begin{defi}
A distribution $t$ is said to be locally $E_s^\rho$ at $p$ if
there exists
an open $\rho$-convex set $U\subset M$ such that $\overline{U}$ is a neighborhood of $p$ and such that $t\in E_s^{\rho}(U)$.
\end{defi}
Corollary (\ref{conjugcoro}) and 
lemma (\ref{localthmrho}) 
imply the following local statement:
\begin{thm}\label{locthmGOOD}
Let $p\in I$, if $t$ is locally $E_s^\rho$ at $p$ for some Euler vetor field $\rho$, then it is so
for any other Euler vector field.
\end{thm}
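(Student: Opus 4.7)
The plan is to combine Corollary \ref{conjugcoro} with Lemma \ref{localthmrho} and then repackage the conclusion in a form that satisfies the definition of ``locally $E_s^\rho$'', which requires a $\rho$-convex neighborhood. Unpacking the hypothesis, there exist an Euler vector field $\rho_1$ and a $\rho_1$-convex open $U_1$ with $\overline{U_1}$ a neighborhood of $p$ such that $t|_{U_1}\in E_s^{\rho_1}(U_1)$. Given an arbitrary second Euler vector field $\rho_2$ defined near $p$, I must produce a $\rho_2$-convex open $V$ with $\overline{V}$ a neighborhood of $p$ on which $t\in E_s^{\rho_2}(V)$.

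First, I would fix a local chart $(x,h)$ centered at $p$ in which $I=\{h=0\}$ and let $\rho_0=h^j\partial_{h^j}$. By Corollary \ref{conjugcoro} applied to the two Euler vector fields $\rho_1$ and $\rho_2$, there exist local diffeomorphisms $\Psi_1,\Psi_2\in G$ defined on a common open neighborhood $V_p$ of $p$ such that $\Psi_{a\ast}\rho_0=\rho_a$ for $a=1,2$. Composing, $\Phi:=\Psi_2\circ\Psi_1^{-1}$ is a local diffeomorphism fixing $I$ that satisfies $\Phi_\ast\rho_1=\rho_2$ in a neighborhood of $p$, so the two dynamics are conjugate there.

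Next, I would apply Lemma \ref{localthmrho} to the pair $(\rho_1,\rho_2)$ over an open set $U\subset U_1\cap V_p$ containing $p$. The lemma delivers a neighborhood $W$ of $p$ on which the two scale spaces coincide: $E_s^{\rho_1}(W)=E_s^{\rho_2}(W)$. Since $t|_W\in E_s^{\rho_1}(W)$ by restriction of the hypothesis, we obtain $t|_W\in E_s^{\rho_2}(W)$ at the level of the identification provided by the lemma.

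The main obstacle is that the definition of $E_s^{\rho_2}$ requires the underlying domain to be genuinely $\rho_2$-convex, and the $W$ produced by Lemma \ref{localthmrho} is only built so that the change of variables argument goes through. I would resolve this by taking a small Euclidean box $B$ around $p$ in the chart $(x,h)$: such a box is $\rho_0$-convex, hence $\Psi_2(B)$ is $\rho_2$-convex because $\Psi_{2\ast}\rho_0=\rho_2$ and $\Psi_2$ conjugates the flows of $\rho_0$ and $\rho_2$. Shrinking $B$ if necessary, I may arrange $V:=\Psi_2(B)\subset W\cap U_1$ and $\overline{V}$ a neighborhood of $p$. The estimate established in the proof of Lemma \ref{localthmrho}, namely
\[
\lambda^{-s}\langle S_2(\lambda)^{\ast}t,\varphi\rangle=\lambda^{-s}\bigl\langle S_1(\lambda)^{\ast}t,\;(\Phi(\lambda)^{-1\ast}\varphi)|\det D\Phi(\lambda)^{-1}|\bigr\rangle,
\]
then applies verbatim on $V$ with $\varphi\in\mathcal{D}(V)$, and the right-hand side is bounded uniformly in $\lambda\in(0,1]$ thanks to $t\in E_s^{\rho_1}(U_1)$ together with the smoothness in $\lambda$ of the family of diffeomorphisms. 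This yields $t|_V\in E_s^{\rho_2}(V)$ with $V$ a $\rho_2$-convex neighborhood as required, completing the proof.
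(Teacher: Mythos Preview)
Your proposal is correct and follows the same route the paper indicates: the theorem is stated there as an immediate consequence of Corollary~\ref{conjugcoro} and Lemma~\ref{localthmrho}, with no further argument. You have simply unpacked that implication and, in addition, taken extra care to manufacture a genuinely $\rho_2$-convex neighborhood $V=\Psi_2(B)$ so that the conclusion literally matches the definition of ``locally $E_s^{\rho_2}$ at $p$'' --- a point the paper leaves implicit.
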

A comment on the statement of the theorem, first the definition
of $\rho$-convexity allows $U$ to have \emph{empty intersection} with $I$, 
because
the definition of $\rho$-convexity is 
$\forall p\in U,\forall \lambda\in(0,1], S(\lambda)[p]\in U$, 
the fact that $\lambda>0$ allows the case of empty intersection with $I$.
The previous theorem allows to give a definition of the space of distributions $E_s(U)$ that are weakly homogeneous of degree $s$ which makes no mention of the choice of Euler vector field:
\begin{defi}\label{defEs} 
A distribution $t$ is weakly homogeneous of degree $s$ at $p$ if $t$ is locally $E_s^\rho$ at $p$ for some $\rho$. $E_s(U)$ is the space of all distributions $t\in\mathcal{D}^\prime(U)$ such that
$\forall p\in (I\cap \overline{U})$, $t$ is weakly homogeneous of degree $s$ at $p$.
\end{defi}
If we look at the definition \ref{defEs}, 
and we take into account
that the space of distributions 
on open sets of $M$ forms a sheaf,
we deduce the following gluing property:
if there is a collection $U_i$ and a collection $t_i\in\mathcal{D}^\prime(U_i)$ 
such that $\forall i,$
$t_i\in E_s(U_i)$ and $t_i=t_j$ on every intersection $U_i\cap U_j$,
then for $U=\cup_i U_i$ there is a unique $t\in \mathcal{D}^\prime(U)$
which lives in $E_s(U)$ and coincides with $t_i$ on $U_i$ for all $i$.
From this gluing property, 
and since the property of being weakly 
homogeneous of degree $s$ at
$p$ is \emph{open}, 
we can deduce that
it is sufficient
to check the
property on a
cover $(U_i)_i$ of $U$
by 
local charts $(x,h)_i:U_i\subset N \mapsto \Omega_i\subset\mathbb{R}^{n+d}$
where $t|_{U_i}$
is in $E_s^{\rho_i}(U_i)$
for the canonical
Euler
$\rho_i$
given by the
chart.
\begin{thm}\label{thmfi}
Let $U$ be an open neighborhood of $I\subset M$, if $t\in E_s(U\setminus I)$ then there exists
an extension $\overline{t}$ in $E_{s^\prime}(U)$ where $s^\prime=s$ if $-s-d\notin\mathbb{N}$ 
and
$s^\prime<s$ otherwise.
\end{thm}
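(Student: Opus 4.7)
The plan is to reduce the manifold statement to the flat case already handled by Theorems \ref{thm1} and \ref{thm2}, and then glue by a partition of unity. The heart of the argument is that the space $E_{s'}$ is defined purely locally (Definition \ref{defEs}) and the invariance results (Theorem \ref{locthmGOOD} and Lemma \ref{localthmrho}) let us freely swap Euler vector fields on small neighborhoods of points of $I$.

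First I would build a good cover of $U$. For each $p\in I\cap U$, choose a chart $(V_p,(x,h))$ centered at $p$ such that $I\cap V_p=\{h=0\}$, the open set $V_p$ has the product form $\Omega'_p\times B_{r_p}\subset\mathbb{R}^{n+d}$ (which guarantees $V_p$ is $\rho_p$-convex for the canonical Euler $\rho_p=h^j\partial_{h^j}$), and $t|_{V_p\setminus I}\in E_s^{\rho_p}(V_p\setminus I)$. The last property holds because $t\in E_s(U\setminus I)$ means $t$ is weakly homogeneous at $p$ for \emph{some} Euler vector field, and by Theorem \ref{locthmGOOD} we may transport this to the canonical Euler $\rho_p$ after possibly shrinking. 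Extract a locally finite subcover $(V_i)_{i\in J}$ of $I\cap U$ and pick an auxiliary open set $V_0\subset U$ with $\overline{V_0}\cap I=\emptyset$ so that $U=V_0\cup\bigcup_{i\in J}V_i$. Subordinate to this cover choose a smooth partition of unity $(\alpha_i)_{i\in\{0\}\cup J}$.

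In each chart, apply Theorem \ref{thm1} (when $s+d>0$) or Theorem \ref{thm2} (when $-m-1<s+d\leqslant -m$) to obtain an extension $\overline{t}_i\in E_{s'}^{\rho_i}(V_i)$ of $t|_{V_i\setminus I}$, with $s'$ as prescribed in the statement. Then define
\begin{equation}
\overline{t}=\alpha_0\,t+\sum_{i\in J}\alpha_i\,\overline{t}_i,
\end{equation}
a locally finite sum of well-defined distributions on $U$. On $U\setminus I$ every $\overline{t}_i$ restricts to $t$ by construction, so $\overline{t}|_{U\setminus I}=(\alpha_0+\sum_i\alpha_i)\,t=t$, proving that $\overline{t}$ extends $t$.

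It remains to check that $\overline{t}\in E_{s'}(U)$, i.e.\ that $\overline{t}$ is weakly homogeneous of degree $s'$ at every $p\in I\cap\overline{U}$. Fix such a $p$, pick an index $i_0\in J$ with $p\in V_{i_0}$, and choose a small $\rho_{i_0}$-convex neighborhood $W$ of $p$ contained in $V_{i_0}$, disjoint from $\supp\alpha_0$, and meeting only finitely many $\supp\alpha_i$ for $i\in F\subset J$, chosen small enough that $W\subset\bigcap_{i\in F}V_i$. On $W$ one has $\overline{t}|_W=\sum_{i\in F}\alpha_i\,\overline{t}_i|_W$. Each $\overline{t}_i$ lies in $E_{s'}^{\rho_i}(V_i)$; multiplication by the smooth function $\alpha_i$ preserves this class, because for any $\varphi\in\mathcal{D}(V_i)$ the family $(\alpha_i)_\lambda\varphi$ is bounded in $\mathcal{D}(V_i)$ as $\lambda\in(0,1]$, so $\lambda^{-s'}\langle(\alpha_i\overline{t}_i)_\lambda,\varphi\rangle=\lambda^{-s'}\langle(\overline{t}_i)_\lambda,(\alpha_i)_\lambda\varphi\rangle$ stays bounded by Banach--Steinhaus. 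Next, Lemma \ref{localthmrho} identifies $E_{s'}^{\rho_i}(W)=E_{s'}^{\rho_{i_0}}(W)$ after possibly shrinking $W$ further, so each $\alpha_i\overline{t}_i|_W\in E_{s'}^{\rho_{i_0}}(W)$. A finite sum of elements of $E_{s'}^{\rho_{i_0}}(W)$ is again in $E_{s'}^{\rho_{i_0}}(W)$, hence $\overline{t}|_W\in E_{s'}^{\rho_{i_0}}(W)$, which is the required local weak homogeneity at $p$. The main obstacle is the bookkeeping in this last step: one has to shrink $W$ successively so that all the Euler vector fields $\rho_i,\ i\in F$, can be simultaneously conjugated to $\rho_{i_0}$ via Lemma \ref{localthmrho}, and one has to verify the stability of $E_{s'}^\rho$ under multiplication by smooth functions, everything else being a straightforward assembly of the pieces already developed.
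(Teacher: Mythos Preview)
Your proof is correct and follows essentially the same strategy as the paper: reduce to the flat case via local charts in which the Euler field is canonical, apply Theorems \ref{thm1} and \ref{thm2} there, and glue the local extensions by a partition of unity. The paper's argument is in fact terser than yours; you have filled in the details it leaves implicit, namely the inclusion of a chart $V_0$ away from $I$, the stability of $E_{s'}^\rho$ under multiplication by smooth functions, and the use of Lemma \ref{localthmrho} to reconcile the different Euler fields on overlaps when checking that the glued object lies in $E_{s'}(U)$.
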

Apply Theorem \ref{locthmGOOD}, restrict to local charts $(x,h)_i:(U_i\setminus I)\mapsto (\Omega_i\setminus I)$ where
$t|_{U_i\setminus I}=t_i\circ(x,h)_i$ where $t_i\in E_s(\Omega_i\setminus I)$, then extend each $t_i$ on $\Omega_i$, $\overline{t_i}\in E_s(\Omega_i)$, pullback the extension denoted by $\overline{t|_{U_i}}\in E_s(U_i)$ on $U_i$, then glue together all $\overline{t|_{U_i}}$ (they coincide on $(U_i\cap U_j)\setminus I $ but might not coincide on $I$ but this does not matter !) 
by a partition of unity $(\varphi_i)_i$
subordinated to the cover $(U_i)_i$.
The extension reads $\overline{t}=\sum_i \varphi_i\overline{t|_{U_i}}$.

\paragraph{The extension depends only on $\rho,\chi$.}
Instead of using the Taylor expansion in local coordinates,
we can use the identity  
$$\sum_{\vert\alpha\vert=n} \frac{h^\alpha}{\alpha!}\partial_h^\alpha f(x,0)=\frac{1}{n!} \left(\left(\frac{d}{dt}\right)^n e^{\log t\rho*}f\right)|_{t=0}(x,h) $$ 
%
% We notice that the flow $e^\log t \rho$ commutes with the action of the Lie derivatives $\rho$, hence $\forall f, e^{\log t \rho*}\left(\rho f\right)=\rho \left(e^{\log t \rho*} f\right)  $. 
% 
We can define the counterterms 
and the 
renormalized distribution 
by the equations:
\begin{eqnarray}\label{equationscounterterms}
\left\langle \tau^\lambda,\varphi \right\rangle=\lim_{t\rightarrow 0}\left\langle te^{-\log\lambda \rho*}\left(-\rho\chi \right),\sum_0^{m} \frac{1}{n!}\left(\frac{d}{dt}\right)^ne^{\log t\rho*}\varphi  \right\rangle \hfill \\
\left\langle \overline{t},\varphi \right\rangle= \left\langle te^{-\log\lambda \rho*}\left(-\rho\chi \right), I_m\left(\varphi\right)  \right\rangle+\left\langle t(1-\chi),\varphi  \right\rangle\\
I_m(\varphi)= \int_0^1\frac{d\lambda}{\lambda}   \frac{1}{m!}\int_0^1 ds(1-s)^{m}\left(\frac{\partial}{\partial s}\right)^{m+1}e^{\log s\rho\star}\varphi
\end{eqnarray}
where we made an effort to convince 
the reader that the formulas 
only depend on $\rho$ and $\chi$.

\section{Appendix.}
\paragraph{The Banach--Steinhaus theorem.}
We will frequently use the Banach--Steinhaus theorem in more general spaces than Banach spaces.
We recall here basic results about Fr\'echet spaces using Gelfand--Shilov \cite{GS2} as our main reference.
Let $E$ be a locally convex topological vector space where the topology is 
given by a countable family of norms, 
ie $E$ is a Fr\'echet space in modern terminology 
and ``countably normed space'' in Gelfand--Shilov terminology.
Hence it is a $\textbf{complete metric space}$ 
(the topology induced by the metric 
is exactly the same as the topology induced by the family of norms) 
(section $3.4$ in \cite{GS2}).
Following \cite{GS2}, we assume 
the family of norms 
defining the topology are ordered
$\Vert .\Vert_p \leqslant \Vert .\Vert_{p+1} ,$
where we denote by $E_p$ the completion of $E$ 
with respect to the norm $\Vert.\Vert_p$ 
which makes $E_p$ a \textbf{Banach space}. 
Then we have the sequence of continuous inclusions 
$E= ...\subset E_{p+1} \subset E_p\subset ...$ and $E=\bigcap_p E_p$.  

 A $\textbf{complete metric space}$ satisfies the Baire property: 
any countable union of closed sets with empty interior has empty interior. 
A consequence of the Baire property is that if a set $U\subset E$ is closed, convex, centrally symmetric ($U=-U$) and absorbant, then it must contain a neighborhood of the origin for the Fr\'echet topology of $E$.
In 4.1 of \cite{GS2}, 
starting from the definition of the continuity of a linear map $\ell$ on $E$, the authors deduced the existence of $p$ and the corresponding seminorm $\Vert.\Vert_p$ such that $\forall x\in E, \ell(x)\leqslant C\Vert x\Vert_p$. 
Following the interpretation of 4.3 in \cite{GS2}, if we denote by $E_p$ the completion of $E$ relative to the norm $\Vert.\Vert_p$ then $\ell$ defines by \textbf{Hahn--Banach} a non unique element of $E_p^\prime$, the topological dual of $E_p$. 
Then the main theorem of 5.3 in \cite{GS2} 
characterizes strongly bounded sets in the topological dual $E^\prime$ of $E$. A set $B\subset E^\prime$ is strongly bounded iff there is $p$ such that $B\subset E_p^\prime$ and elements of $B$ are bounded in the norm of $E_p^\prime$.
$$\exists C, \forall f\in B, \sup_{\Vert \varphi\Vert_p\leqslant 1} \vert\left\langle f, \varphi \right\rangle\vert\leqslant C .$$
The weak topology in $E^\prime$ is generated by the collection of open sets 
$$\{f ;  \vert \left\langle f , \varphi\right\rangle \vert <\varepsilon \}$$
By definition, if $A$ is a weakly bounded set, then:
$$\forall \varphi, \sup_{f\in A}\vert \left\langle f , \varphi\right\rangle \vert < \infty .$$
In $5.5$ it is proved that weakly bounded sets of $E^\prime$ are in fact strongly bounded in $E^\prime$.
Let $A$ be a weakly bounded set in $E^\prime$. Then the set 
$B=\{ \varphi ; \forall f\in A, \vert \left\langle f , \varphi\right\rangle \vert<1  \} $ 
is closed, convex, centrally symmetric ($U=-U$) and absorbant 
therefore 
it must contain a neighborhood 
of the origin by lemma of section $3.4$.
$$ \{ \Vert \varphi\Vert_p\leqslant C \} \subset B$$
for a certain seminorm $\Vert.\Vert_p$ by definition 
of a neighborhood of the origin in a Fr\'echet space.
By definition elements of $A$ are bounded on this neighborhood of the origin
$$\forall f\in A, \varphi\in B, \vert \left\langle f , \varphi\right\rangle \vert<1 $$
$$\implies \forall f\in A, \Vert\varphi\Vert_p\leqslant C, \vert \left\langle f , \varphi\right\rangle \vert<1 $$
$$\implies  \forall f\in A, \vert \left\langle f , \varphi\right\rangle \vert\leqslant C^{-1} \Vert\varphi\Vert_p .$$

 Now we will apply these 
abstract results in the case of bounded families of distributions:  
\begin{thm}
Let $U\subset\mathbb{R}^d$ be an open subset.
If $A$ is a weakly bounded family of distributions in $\mathcal{D}^\prime(U)$ :
$$\forall \varphi\in \mathcal{D}(U), \sup_{t\in A}\left\langle t,\varphi \right\rangle <\infty $$
then for all compact subset $K\subset U$:
$$\exists p,\exists C_K, \forall t\in A, \forall \varphi\in \mathcal{D}_K(U), \vert \left\langle t, \varphi\right\rangle \vert\leqslant C_K \pi_p(\varphi) .$$
\end{thm}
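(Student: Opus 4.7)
The plan is to reduce the statement to the abstract Banach--Steinhaus theorem for Fr\'echet spaces that was established earlier in the appendix. The key observation is that while $\mathcal{D}(U)$ itself is not Fr\'echet (it is an LF-space), the subspace $\mathcal{D}_K(U)$ of test functions supported in a fixed compact $K$ \emph{is} a Fr\'echet space: its topology is given by the countable, ordered family of norms $(\pi_p)_{p\in\mathbb{N}}$ defined at the beginning of the chapter, and it is complete because any Cauchy sequence of test functions supported in $K$ converges in all $C^k$ norms to a smooth function still supported in $K$.

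First I would fix an arbitrary compact $K\subset U$ and restrict every $t\in A$ to $\mathcal{D}_K(U)$. Each such restriction is a continuous linear form on the Fr\'echet space $\mathcal{D}_K(U)$, so we get a family $\{t|_{\mathcal{D}_K(U)}\telque t\in A\}$ of elements of the topological dual $\mathcal{D}_K(U)^\prime$. The hypothesis that $A$ is weakly bounded in $\mathcal{D}^\prime(U)$ specializes directly to say that for every $\varphi\in\mathcal{D}_K(U)$,
\begin{equation*}
\sup_{t\in A}|\langle t,\varphi\rangle|<\infty,
\end{equation*}
which is exactly the statement that the restricted family is weakly bounded in $\mathcal{D}_K(U)^\prime$.

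Next, I would invoke the abstract theorem recalled just above from Gelfand--Shilov (Section 5.5 of \cite{GS2}): in the dual of a Fr\'echet space, every weakly bounded set is strongly bounded, and the proof proceeds by showing that the polar set
\begin{equation*}
B=\{\varphi\in\mathcal{D}_K(U)\telque \forall t\in A,\ |\langle t,\varphi\rangle|<1\}
\end{equation*}
is closed, convex, centrally symmetric and absorbant in $\mathcal{D}_K(U)$, hence contains a neighborhood of the origin by the Baire-type lemma of Section 3.4 of \cite{GS2}. Since a basis of neighborhoods of $0$ in $\mathcal{D}_K(U)$ is given by the sets $\{\pi_p(\varphi)\leqslant C^{-1}\}$, there exist an index $p$ and a constant $C_K>0$ such that $\{\pi_p(\varphi)\leqslant C_K^{-1}\}\subset B$. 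Homogeneity then yields the uniform estimate
\begin{equation*}
\forall t\in A,\ \forall\varphi\in\mathcal{D}_K(U),\qquad |\langle t,\varphi\rangle|\leqslant C_K\,\pi_p(\varphi),
\end{equation*}
which is the desired conclusion.

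The only potential obstacle is purely bookkeeping: verifying that the assumptions of the abstract lemma genuinely apply, i.e.\ that $\mathcal{D}_K(U)$ is Fr\'echet with the norms $\pi_p$ (as opposed to the more refined $\pi_{p,K}$), and that the set $B$ above is closed for the Fr\'echet topology---which is immediate since it is an intersection of closed half-spaces, each $t\in A$ being continuous on $\mathcal{D}_K(U)$ by the very definition of a distribution. There is no analytic difficulty beyond this reduction; the whole content is the Baire category argument already carried out in the preceding paragraphs of the appendix.
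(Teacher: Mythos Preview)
Your proposal is correct and follows essentially the same route as the paper: restrict every $t\in A$ to the Fr\'echet space $\mathcal{D}_K(U)=\bigcap_k C_0^k(K)$ with norms $\pi_p$, observe the restricted family is weakly bounded, and apply the abstract Gelfand--Shilov result that weakly bounded subsets of the dual of a Fr\'echet space are strongly bounded. The paper's proof is terser (it simply cites the strong boundedness and says ``translating into estimates yields the result''), while you spell out the polar-set and homogeneity step, but the argument is the same.
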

\begin{proof}
Set $\Vert \varphi\Vert_p= \pi_p( \varphi)$, it is well known this is a norm.
The family $A$ is $\textbf{weakly bounded}$ in the dual $\mathcal{D}^\prime(K)$ of the Fr\'echet space $\mathcal{D}(K)=\bigcap_{k}C_0^k(K)$ ie the intersection of all spaces of $C^k$ functions supported in $K$. It is thus strongly bounded in the dual space $\mathcal{D}^\prime(K)$ and translating the strong boundedness into estimates yields the result.
\end{proof}
\begin{thm}
Let $K$ be a fixed compact subset of $\mathbb{R}^d$.
If $A$ is a family of distributions in $\mathcal{D}_K^\prime\left(U\right)$ $\textbf{supported}$ on $K\subset U$ and
$$\forall \varphi\in C^\infty\left(U\right), \sup_{t\in A}\left\langle t,\varphi \right\rangle <\infty ,$$
then $\forall K_2 $ which is a compact neighborhood of $K$, $\exists p,\exists C,$
$$ \forall t\in A, \forall \varphi\in C^\infty(U), \vert \left\langle t , \varphi\right\rangle \vert\leqslant C\pi_{p,K_2}(\varphi).$$
\end{thm}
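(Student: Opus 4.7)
The plan is to reduce this to the previous theorem (weak boundedness implies strong boundedness on bounded sets in $\mathcal{D}(K_2)$) by introducing a cutoff, exploiting the fact that each $t \in A$ has support in the \emph{fixed} compact $K$.

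First I would pick a plateau function $\chi \in \mathcal{D}(U)$ with $\chi = 1$ on an open neighborhood of $K$ and $\supp \chi \subset K_2$. Such a $\chi$ exists because $K_2$ is a compact neighborhood of $K$ and one can smoothly interpolate between a slightly larger neighborhood of $K$ contained in the interior of $K_2$ and the exterior of $K_2$. Since every $t \in A$ is supported on $K$, one has the key identity
\[
\left\langle t,\varphi\right\rangle = \left\langle t,\chi\varphi\right\rangle \qquad \forall \varphi\in C^\infty(U),\ \forall t\in A.
\]

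Next I would observe that the hypothesis $\sup_{t\in A}|\langle t,\varphi\rangle|<\infty$ on all of $C^\infty(U)$ implies in particular weak boundedness of $A$ viewed as a subset of $\mathcal{D}^\prime(U)$, since $\mathcal{D}(U)\subset C^\infty(U)$. Therefore the previous theorem (the distributional Banach--Steinhaus statement just proved) applies and gives integers $p$ and a constant $C_{K_2}>0$ such that
\[
\forall t\in A,\ \forall \psi\in \mathcal{D}_{K_2}(U),\qquad |\langle t,\psi\rangle|\leqslant C_{K_2}\,\pi_p(\psi).
\]
Applying this with $\psi = \chi\varphi \in \mathcal{D}_{K_2}(U)$ gives $|\langle t,\varphi\rangle| = |\langle t,\chi\varphi\rangle|\leqslant C_{K_2}\pi_p(\chi\varphi)$.

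Finally I would estimate $\pi_p(\chi\varphi)$ in terms of $\pi_{p,K_2}(\varphi)$ using the Leibniz rule: for each multiindex $\alpha$ with $|\alpha|\leqslant p$,
\[
\partial^\alpha(\chi\varphi) = \sum_{\beta\leqslant\alpha}\binom{\alpha}{\beta}\partial^\beta\chi\,\partial^{\alpha-\beta}\varphi,
\]
and since $\supp(\chi\varphi)\subset K_2$ and the derivatives $\partial^\beta\chi$ are uniformly bounded on $K_2$, one gets $\pi_p(\chi\varphi)\leqslant C_\chi\,\pi_{p,K_2}(\varphi)$ with $C_\chi$ depending only on $\chi$ and $p$. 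Setting $C = C_{K_2}C_\chi$ yields the desired estimate. No step is really an obstacle here: the only mild subtlety is ensuring that $\chi$ can be chosen supported in $K_2$ and equal to $1$ near $K$, which is a standard application of smooth Urysohn given that $K_2$ is a \emph{neighborhood} of $K$ (so there is room between $K$ and $\partial K_2$ to interpolate smoothly).
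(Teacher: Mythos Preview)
Your proof is correct and follows essentially the same route as the paper: introduce a plateau $\chi$ with $\chi|_K=1$ and $\text{supp }\chi\subset K_2$, use $\langle t,\varphi\rangle=\langle t,\chi\varphi\rangle$ to reduce to the previous Banach--Steinhaus theorem on $\mathcal{D}_{K_2}(U)$, and finish with the Leibniz estimate $\pi_p(\chi\varphi)\leqslant C_\chi\,\pi_{p,K_2}(\varphi)$. The paper's argument is the same, just more tersely written.
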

\begin{proof} 
In the second case, first we find a compact set $K_2$ such that $K_2$ is a neighborhood of $K$. We set the Fr\'echet $E=\bigcap_k C_0^k(K_2)$ which is the intersection of all $C^k$ functions supported in $K_2$. These functions should not necessarily vanish on the complement of $K$. Then we pick any plateau function $\chi$ 
such that $\chi|_K=1$ and $\chi=0$ on the complement of $K_2$. $t\in A$ is supported on $K$ thus $\forall t\in A, \forall \varphi\in C^\infty(U), \vert \left\langle t , \varphi\right\rangle \vert=\vert \left\langle t , \chi\varphi\right\rangle \vert $ then we reduce to the previous
theorem:
$\forall t\in A, \forall \varphi\in C^\infty(U), \vert \left\langle t , \varphi\right\rangle \vert=\vert \left\langle t , \chi\varphi\right\rangle \vert\leqslant C_{K_2}\sup_{\vert\alpha\vert\leqslant p} \vert \partial^\alpha \chi\varphi \vert_{L^\infty}\leqslant C\sup_{\vert\alpha\vert\leqslant p} \vert \partial^\alpha \varphi \vert_{L^\infty(K_2)}$. 
\end{proof}

\begin{coro}
Let $U$ be an arbitrary open domain, $t\in E_s(U)$ iff $t\in \mathcal{D}^\prime(U)$ is a distribution on $U$ 
$$\forall\varphi\in \mathcal{D}(U),\exists C(\varphi), \sup_{\lambda\in[0,1]} \vert\lambda^{-s}t_\lambda,\varphi \vert\leqslant C(\varphi)$$ 
$$\Leftrightarrow \forall K\subset U,  \exists (p,C_K), 
\forall \varphi\in \mathcal{D}_K(U),  
\sup_{\lambda\in[0,1]} \vert\lambda^{-s}t_\lambda,\varphi \vert\leqslant C_K \pi_p(\varphi).
$$
\end{coro}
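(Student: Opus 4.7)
The plan is to recognize that this corollary is essentially a reformulation of the first Banach--Steinhaus theorem of the appendix, applied to the one-parameter family of distributions $A = \{\lambda^{-s} t_\lambda \mid \lambda \in (0,1]\} \subset \mathcal{D}^\prime(U)$. So the work is almost entirely bookkeeping: translate each side of the equivalence into a statement about the family $A$, and invoke the appropriate theorem.

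The direction $(\Leftarrow)$ is immediate. If for every compact $K \subset U$ there are $p, C_K$ such that $\sup_{\lambda \in [0,1]} |\langle \lambda^{-s} t_\lambda, \varphi \rangle| \leqslant C_K \pi_p(\varphi)$ for all $\varphi \in \mathcal{D}_K(U)$, then for any fixed $\varphi \in \mathcal{D}(U)$ one picks $K = \supp \varphi$ and sets $C(\varphi) := C_K \pi_p(\varphi)$; this is manifestly finite and gives the first bound. No analysis is required here.

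For $(\Rightarrow)$, assume $t \in E_s(U)$, i.e. for each $\varphi \in \mathcal{D}(U)$ the quantity $\sup_{\lambda \in (0,1]} |\langle \lambda^{-s} t_\lambda, \varphi \rangle|$ is finite. This is exactly the statement that the family $A$ is \emph{weakly bounded} in $\mathcal{D}^\prime(U)$ in the sense of the appendix. Then I apply directly the first theorem of the appendix (the Banach--Steinhaus statement for weakly bounded families in $\mathcal{D}^\prime(U)$): for every compact $K \subset U$, there exist an integer $p$ and a constant $C_K$ such that $|\langle u, \varphi \rangle| \leqslant C_K \pi_p(\varphi)$ uniformly for all $u \in A$ and all $\varphi \in \mathcal{D}_K(U)$. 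Taking $u = \lambda^{-s} t_\lambda$ and then taking the supremum over $\lambda$ yields exactly the required uniform estimate.

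There is no real obstacle: both the weak-to-strong boundedness statement in Fr\'echet duals and its concrete specialization to $\mathcal{D}^\prime(U)$ have been established a few paragraphs earlier, so the proof amounts to two lines invoking these results plus the trivial implication in the reverse direction. The only point worth making explicit in the write-up is that the $(\Rightarrow)$ direction genuinely uses the Baire category argument via Banach--Steinhaus, whereas the $(\Leftarrow)$ direction is purely formal.
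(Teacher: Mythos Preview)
Your proposal is correct and matches the paper's intended approach: the corollary is stated in the appendix immediately after the Banach--Steinhaus theorem for $\mathcal{D}^\prime(U)$ precisely because it is its direct application to the family $A=\{\lambda^{-s}t_\lambda:\lambda\in(0,1]\}$, and the paper gives no separate proof. Your write-up of both directions is exactly the argument one expects.
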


\chapter{A prelude to the microlocal extension.}
\subsection{Introduction.}
First, let us recall the problem which was solved in Chapter $1$. 
We started from a smooth
manifold $M$ and
a closed embedded submanifold $I\subset M$.
We defined
a general setting 
in which
we could scale
transversally
to $I$ using
the flow generated by 
a class of 
vector fields
called \textbf{Euler} 
vector fields. 
Then for each distribution
$t\in\mathcal{D}^\prime(M\setminus I)$ 
which was weakly homogeneous of degree $s$
in some precise sense 
(we called $E_s(M\setminus I)$ the space of such distributions):\\
- the notion of 
weak homogeneity
was made 
independent
of the choice of Euler 
$\rho$,\\
- we proved that $t$
has an extension 
$\overline{t}\in E_{s^\prime}(M)$
for some $s^\prime$.
We also understood that
the problem
of extension is
essentially a local problem
and that everything
can be reduced
to the extension problem
in $\mathbb{R}^{n+d}$ with coordinates $(x,h)$, 
$I=\mathbb{R}^n\times\{0\}=\{h=0\}$
and where the
scaling
is defined by 
$\rho=h^j\frac{\partial}{\partial h^j}$.
All the ``geometry''
is somehow contained
in the possibility
of choosing another 
Euler vector field.
In fact, 
the pseudogroup $G$ 
of local
diffeomorphisms
of $\mathbb{R}^{n+d}$ preserving
$I$
acts on the
space of
Euler vector fields.

 However 
this gives no a priori information 
on the wave front set 
of the extension $\overline{t}$. 
But in QFT, 
we need conditions on $WF(\overline{t})$ 
in order to define products of distributions.
By the pull-back theorem of H\"ormander
(\cite{Hormander} thm $8.2.4$), 
there is no reason for $WF(t_\lambda)$ to be equal to $WF(t)$. Hence in order
to control the wave front set of $\overline{t}$, 
the first step is to build some cone $\Gamma$ 
which bounds 
the wave front set of 
all scaled distributions $t_\lambda$ 
and a natural candidate for $\Gamma$
is 
$\Gamma=\bigcup_{\lambda\in(0,1]}WF(t_\lambda) $.  
% More precisely, we characterize $\Gamma$ with the following \textbf{universal property}:
%\begin{equation}\label{universalequation}
%WF(t)\subset \Gamma\implies\forall\lambda\in(0,1], WF(t_\lambda)\subset\Gamma 
%\end{equation}
%$\Gamma$ must solve this geometric equation.  
%
%
% 
% Then we describe the solutions of this geometric equation (\ref{universalequation}).
%
% If a generalized Euler $\rho$ is given, there is a local foliation of a neighborhood of $I$ which leaves is the set of all points $p$ which have same endpoint by the flow $\lim_{t\rightarrow \infty} e^{-t\rho}(p)$. To each foliation, we associate the union of the conormal bundle of each leaves and call it $C_\rho$.
We denote by $(x,h;k,\xi)$ the coordinates
in $T^\star \mathbb{R}^{n+d}, 
(x;k)\in T^\star\mathbb{R}^{n},
(h;\xi)\in T^\star\mathbb{R}^{d}$.
We use the notation $T^\bullet M$
for the cotangent bundle $T^\star M$
with the zero section
removed.
Denote by $C_\rho$ the set 
$\{(x,h;k,0)|k\neq 0\}\subset T^\bullet \mathbb{R}^{n+d}$. 
We call $C=\{(x,0;0,\xi)|\xi\neq 0\}$ 
the intersection of the conormal bundle of $I$ 
with $T^\bullet \mathbb{R}^{n+d}$.
In the first part of this Chapter, 
we will explain the geometric
interpretation
of the set $C_\rho$
and how it
depends on the choice 
of Euler $\rho$.
$C_\rho$ plays an important
role for the determination 
of the
analytical 
structure of 
local counterterms: 
if $WF(t)$ does not meet 
$C_\rho=\{(x,h;k,0)|k\neq 0\}$, 
then the \textbf{local counterterms} 
constructed from $t$
(\ref{equationscounterterms}) 
are distributions with 
wave front set 
in the \textbf{conormal} 
(we meet  
them again 
in Chapter $6$
under the form of
anomaly counterterms). 
Whereas the condition
$WF(t)\cap C_\rho=\emptyset$ 
depends on the choice 
of $\rho$,
the stronger
condition 
$\overline{WF(t)}|_I\subset C$
does not depend on
$\rho$
and
implies that
for any choice of Euler  
$\rho$, 
$WF(t)\cap C_\rho=\emptyset$
in some neighborhood
of $I$. 
\paragraph{The problem of the closure of $\Gamma$ over $I$.}
So we are led 
to study 
under which conditions on $WF(t)$ 
the cone $\Gamma$ defined by 
$\Gamma=\bigcup_{\lambda\in(0,1]}WF(t_\lambda) $ 
satisfies the constraint $\overline{\Gamma}|_I\subset C$,
where $\overline{\Gamma}$ is the closure of 
$\Gamma\subset T^\bullet\left(M\setminus I\right)$ in $T^\bullet M$.
Then we find a necessary and sufficient condition
on $WF(t)$ which 
we call
\textbf{soft landing condition}
for 
$\overline{\Gamma}|_I$ to lie in 
$C$.
%\begin{thm}\label{keypropforcut}
% 
% Let $\rho$ be a generalized Euler vector field. Let $WF(t)$ be given, if $WF(t)$ satisfies the \textbf{metric condition}
% For all compact set $K$ which is geodesically convex with respect to scaling by $\rho$,
%\begin{equation}\label{Broudermetric}
%\exists\varepsilon>0,\exists\delta>0, WF(t)|_{\vert h\vert\leqslant \varepsilon}\subset\{\vert k\vert\leqslant \delta\vert h\vert\vert\xi\vert \}
%\end{equation}     
%is \emph{equivalent} to the fact that the conic set $\Gamma=\{(x,\lambda^{-1}h,k,\lambda\xi)|\lambda\in(0,1],\vert\lambda^{-1}h\vert\leqslant\varepsilon ,(x,h,k,\xi)\in WF(t)   \}$ satisfies
%\begin{eqnarray}
%\overline{\Gamma|_I }\subset C=(TI)^\perp
%\end{eqnarray}
%which is turn equivalent to the fact that 
%for any generalised Euler vector field $\rho^\prime$ where $\rho^\prime$ is not necessarily equal to $\rho$, 
%\begin{equation}
%\Gamma\cap C_{\rho^\prime}=\emptyset
%\end{equation}
The fact that
$WF(t)$ satisfies the soft landing condition guarantees that whatever generalized Euler vector field $\rho$ we choose, the counterterms are conormal distributions supported on $I$. Furthermore, it is a condition which allows 
to control 
the wave front set of the 
extension 
as we will see in Chapter $3$.
\paragraph{The soft landing condition 
is not sufficient 
in order to control the wave front set.}
Assume that $t\in E_s(M\setminus I)$ 
and $WF(t)$ satisfies
the soft landing
condition.
Under these 
assumptions, 
we address the question: 
in which sense 
$\lim_\varepsilon\int^1_\varepsilon \frac{d\lambda}{\lambda} t\psi(\frac{h}{\lambda})$ 
converges to $\overline{t}$ ? 
More precisely for 
what
topology on
$\mathcal{D}^\prime(M)$ 
do
we have convergence ?
We already know from
Theorems \ref{thm1} and \ref{thm2} 
in Chapter $1$ 
that the integral converges 
in the \textbf{weak topology} of $\mathcal{D}^\prime$ but
this is not sufficient 
since 
it 
does not
imply 
the convergence
in stronger topologies which
control wave front sets
as the following 
examples show:
indeed in (\ref{contrexamplequitue}), 
we construct 
a distribution $t$ 
such that 
$t(1-\chi_{\varepsilon^{-1}})\underset{\varepsilon\rightarrow 0}{\rightarrow}t$
in $\mathcal{D}^\prime$,
whereas 
$\forall\varepsilon\in(0,1],t(1-\chi_{\varepsilon^{-1}})$ 
is smooth in 
$M\setminus I$, 
the wave front of 
$t$
can contain any ray $p\in T^\bullet M|_I$ 
in the cotangent cone over $I$.
Our example shows that generically,
we cannot control the wave
front set 
of $\lim_{\varepsilon\rightarrow 0}t(1-\chi_{\varepsilon^{-1}})$ 
even if the limit exists in $\mathcal{D}^\prime$ 
and each $t(1-\chi_{\varepsilon^{-1}})\in \mathcal{D}^\prime_\Gamma$ 
has wave front set in a given cone $\Gamma$.
Thus our assumptions that
$t\in E_s(M\setminus I)$ and $WF(t)$ 
satisfies the soft landing condition
are \textbf{not sufficient 
to control the wave front set of the extension $\overline{t}$}.
We will later prove in Chapter $3$, 
that the supplementary condition 
that $\lambda^{-s}t_\lambda$ be \textbf{bounded in $\mathcal{D}^\prime_\Gamma(M\setminus I)$}
(see Definition \ref{dprimegam}) is sufficient to have the estimate $WF(\overline{t})\subset \overline{WF(t)}\cup C$.  
%\paragraph{Converse result.}  
%% First, we find an example where we have to add the conormal bundle $C$ to $\overline{\Gamma}$: $WF(t)\subset \overline{\Gamma}\bigcup C$.
%% 
%% Secondly, we find an example where we have to add both the conormal bundle $C$ and also a convex sum $L+C$ where $L$ is a ray in the cotangent cone.
%% 
%% Finally, we find examples where the Wavefront of the sum is not contained in $\overline{\Gamma}\bigcup C\bigcup\left(\overline{\Gamma}+C\right)$:
%%$WF(t)\varsubsetneq \overline{\Gamma}\bigcup C\bigcup\left(\overline{\Gamma}+C\right)$. 
%However, we prove a converse result: if we start from a $\textbf{global}$ distribution $t\in \mathcal{D}^\prime(M)$, 
%the integral $\int_0^1\frac{d\lambda}{\lambda}t\psi(\frac{h}{\lambda})$ actually converges in $D_{WF(t)\cup C \cup (WF(t) + C)}^\prime$
%and the bound on the WF is sharp.
\paragraph{Notation and preliminary definitions.}
In this paragraph, we recall results on distribution spaces that we will use in the proof of our main theorem which controls the wave front set of the extension. Furthermore the seminorms that we define here allow to write proper estimates.
%
% We recall the meaning of the strong topology of $D_m^\prime(K)$ which is the subspace of distribution $D_m^\prime(K)$ compactly supported in $K$ of order $m$.
%
% By a deep structure theorem of Laurent Schwartz, each element $t\in D_m^\prime(K)$ can be $\emph{uniquely}$ represented as a finite sum
%\begin{equation}
%\left\langle t , \varphi\right\rangle=\sum_{\vert\alpha\vert\leqslant m} \mu_\alpha(\partial^\alpha \varphi)
%\end{equation} 
%where $\mu_\alpha$ are $\textbf{Radon measures}$ (not necessarily positive but linear continuous on $C^0(K)$) supported on $K$.
%
%
% A basis of neighborhood of the origin in $D_m^\prime(K)$ for this topology is of the form $\{t \vert \vert\left\langle t,\varphi \right\rangle\vert\leqslant \varepsilon \sup_{x\in K,\vert\alpha\vert\leqslant m} \vert \partial^\alpha\varphi(x)\vert \}$.
% 
% 
% 
For any cone $\Gamma\subset T^\bullet\mathbb{R}^d$, we let $\mathcal{D}^\prime_\Gamma$ be the set of distributions with wave front set in $\Gamma$.
We define the set of seminorms $\Vert .\Vert_{N,V,\chi}$ 
on $\mathcal{D}^\prime_\Gamma$. 
\begin{defi}
For all $\chi\in \mathcal{D}(\mathbb{R}^{d})$, for all closed cone $V\subset(\mathbb{R}^{d}\setminus \{0\})$ such that
$\left(\text{supp }\chi\times V\right)\cap \Gamma=\emptyset$,  
$\Vert t\Vert_{N,V,\chi}= \sup_{\xi\in V}\vert(1+\vert\xi\vert)^N\widehat{t\chi}(\xi)\vert$.
\end{defi}
We recall the definition
of the topology $\mathcal{D}^\prime_\Gamma$ (see \cite{Alesker} p.~14 and \cite{Geomasympt} Chapter $6$ p.~333),
\begin{defi}\label{dprimegam}
The topology of $\mathcal{D}^\prime_\Gamma $
is the weakest topology
that makes all seminorms
$\Vert .\Vert_{N,V,\chi}$
continuous
and which is stronger
than the weak topology
of $\mathcal{D}^\prime(\mathbb{R}^d)$.
%and also for any compactly supported distribution in $D^\prime(K)$ of order $m$,
%\begin{equation}
%\Vert (1+\vert k\vert+\vert\xi\vert)^{-m} \widehat{t} \Vert_{L^\infty}<\infty
%\end{equation}
Or it can be formulated as the topology which makes
all seminorms
$\Vert .\Vert_{N,V,\chi}$
and the seminorms of the weak topology:
\begin{equation}
\forall \varphi\in \mathcal{D}\left(\mathbb{R}^{d}\right) ,\vert\left\langle t,\varphi \right\rangle\vert=P_\varphi\left(t\right)
\end{equation}
continuous.
\end{defi}
%\begin{defi}
%$\lambda\mapsto c_\lambda \in L^p_{\frac{d\lambda}{\lambda}}([0,1],D_\Gamma^\prime),p\in [1,\infty)$ iff 
%\\for all $ \chi\in C_c^\infty(\mathbb{R}^{n+d})$,  for all closed cone $V\subset\mathbb{R}^{n+d *}$
%$$\left(\text{supp }\chi\times V\right)\cap \Gamma=\emptyset \implies  \int_0^1 \frac{d\lambda}{\lambda}\Vert c_\lambda\Vert^p_{N,V,\chi}<\infty  $$
%\end{defi}
We say that $B$ is bounded in $\mathcal{D}^\prime_\Gamma$ if $B$ is bounded in $\mathcal{D}^\prime$ and if for all seminorms $\Vert .\Vert_{N,V,\chi}$ defining the topology of $\mathcal{D}^\prime_\Gamma$,
$$\sup_{t\in B} \Vert t\Vert_{N,V,\chi}<\infty .$$
%We also use the seminorms:
%$$\forall \varphi\in\mathcal{D}(\mathbb{R}^d), \pi_m(\varphi):=\sup_{\vert\alpha\vert\leqslant m} \Vert \partial^\alpha\varphi\Vert_{L^\infty(\mathbb{R}^d)},$$
%$$\forall \varphi\in\mathcal{E}(\mathbb{R}^d),\forall K\subset \mathbb{R}^d, \pi_{m,K}(\varphi):==\sup_{\vert\alpha\vert\leqslant m} \Ver \partial^\alpha\varphi\Vert_{L^\infty(K)}.$$
\section{Geometry in cotangent space.}

We will denote by 
$C=\left(TI\right)^\perp\cap T^\bullet M$ 
the 
intersection 
of the conormal bundle $\left(TI\right)^\perp$
with the cotangent cone $T^\bullet M$.
\begin{figure} %on ouvre l'environnement figure
\begin{center}
\includegraphics[width=8cm]{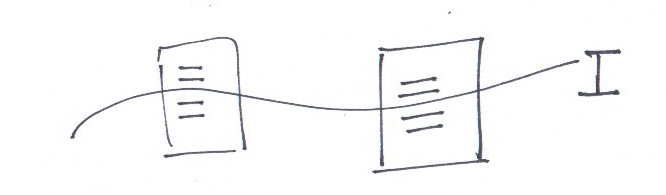} %ou image.png, .jpeg etc.
\caption{The conormal bundle to $I$.} %la légende
%l'étiquette pour faire référence à cette image
\end{center}
\end{figure} %on ferme l'environnement figure
For any subset $\Gamma$ of $T^\bullet M$
and for any subset $U$ of $M$
we denote by $\Gamma|_U$ 
the set $\Gamma\cap T^\bullet U$
where $T^\bullet U$ is 
the restriction of the cotangent
cone over $U$.
\paragraph{Associating a fiber bundle to a generalized Euler $\rho$.} 
We work with Euler vector fields $\rho$ defined on a neighborhood $\mathcal{V}$ of $I$ then $\mathcal{V}$ fibers over $I$ in such a way that the leaves of these fibrations are the set of all flow lines ending at a given of point of $I$, these leaves are invariant by the flow of $\rho$. 
\begin{defi}
Define the map $\pi^\rho:p\in \mathcal{V} \mapsto\lim_{t\rightarrow \infty} e^{-t\rho}(p)\in I $.
\end{defi}
\begin{figure} %on ouvre l'environnement figure
\begin{center}
\includegraphics[width=8cm]{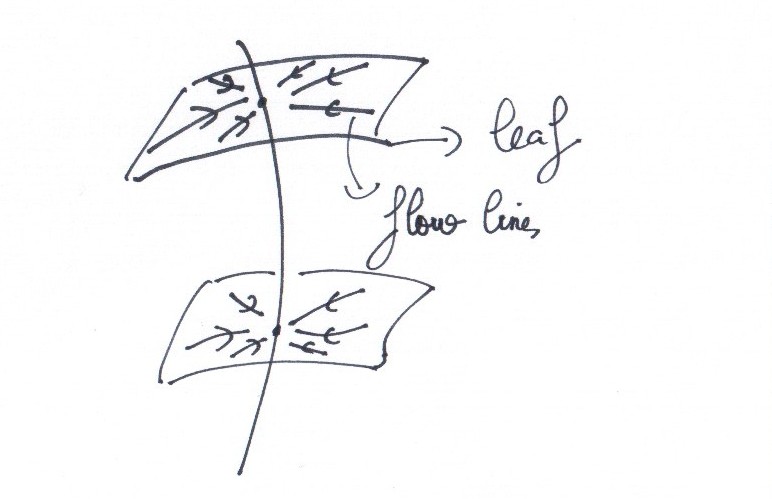} %ou image.png, .jpeg etc.
\caption{The foliation, endpoints of flow lines and leaves.} %la légende
%l'étiquette pour faire référence à cette image
\end{center}
\end{figure} %on ferme l'environnement figure
\begin{prop}
Let $\rho$ be a generalized Euler vector field defined on a neighborhood $\mathcal{V}$ of $I$, then $\mathcal{V}$ fibers over $I$, $\pi^\rho:\mathcal{V}\mapsto I$. 
\end{prop}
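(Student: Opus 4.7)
The plan is to reduce the statement to the canonical model $\rho_0 = h^j\partial_{h^j}$ via the local conjugation of Corollary \ref{conjugcoro}, and then glue the local trivializations. Fix $p_0\in I$ and choose a chart $(x,h):U\to \mathbb{R}^{n+d}$ centered at $p_0$ with $I\cap U=\{h=0\}$. For the canonical Euler $\rho_0=h^j\partial_{h^j}$, the flow is explicit, $e^{-t\rho_0}(x,h)=(x,e^{-t}h)$, so it is defined for all $t\geqslant 0$ on all of $\mathbb{R}^{n+d}$, it contracts exponentially toward $I$, and
\[
\pi^{\rho_0}(x,h)=\lim_{t\to\infty}e^{-t\rho_0}(x,h)=(x,0)
\]
is a smooth submersion whose fiber over $(x_0,0)$ is the affine leaf $\{x=x_0\}$. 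This already establishes the proposition for the model $\rho_0$.

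By Corollary \ref{conjugcoro}, shrinking $U$ to a $\rho$-convex neighborhood $V$ of $p_0$, there is a local diffeomorphism $\Psi\in G$, $\Psi:V\to V$, fixing $I$ and intertwining the two flows: $\Psi\circ e^{-t\rho_0}=e^{-t\rho}\circ \Psi$. Applying $\Psi$ on the right to the formula for $\pi^{\rho_0}$ then shows that, on $V$, the limit defining $\pi^\rho$ exists and
\[
\pi^\rho\big|_V=\Psi\circ\pi^{\rho_0}\circ\Psi^{-1},
\]
so $\pi^\rho|_V$ is a smooth submersion onto $I\cap V$ and $\Psi$ provides an explicit local trivialization whose fibers are the $\Psi$-images of $\{x=x_0\}$. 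This gives the local bundle structure at every point of $I$.

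To obtain a global neighborhood $\mathcal V$ on which $\pi^\rho$ is defined, I would use the standard form \eqref{generalform} of $\rho$ to bound $|h(t)|$ along any forward trajectory: writing $h^i(t)=h^i(e^{-t\rho}(p))$, we get $|\dot h^i|\leqslant |h^i|+C|h|^2$, and Gronwall's lemma (as in the proof of Proposition \ref{propositionvariablefamily}) yields $|h(t)|\leqslant e^{-t}|h(0)|/(1-Ct|h(0)|)$ for $|h(0)|$ small enough on each compact $K\subset I$. This provides, on a tube $\{|h|<\varepsilon_K\}$ over $K$, uniform forward completeness and exponential convergence to $I$. Using paracompactness of $I$ and a locally finite cover by such tubes, one obtains an open neighborhood $\mathcal V$ of $I$ on which $\pi^\rho$ is globally defined.

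Finally, gluing is automatic: the map $\pi^\rho$ is intrinsically defined by the limit, so the local submersions obtained in different charts agree on overlaps, and the local trivializations of the form $(\Psi,\mathrm{id})$ exhibit $\pi^\rho:\mathcal V\to I$ as a smooth locally trivial fibration. The only delicate point is the uniform forward-completeness estimate needed to produce $\mathcal V$; once that is in hand, everything else is a transport of the trivial model through $\Psi$.
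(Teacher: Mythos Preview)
Your approach is essentially the same as the paper's: both reduce to the canonical model via Corollary~\ref{conjugcoro}. The paper simply says that in the chart where $\rho=h^j\partial_{h^j}$ the fibration is $(x,h)\mapsto x$, while you phrase the same reduction as transporting $\pi^{\rho_0}$ through the conjugating diffeomorphism $\Psi$; these are the same argument in slightly different language.

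You do supply more than the paper does: the paper takes the existence of the limit $\lim_{t\to\infty}e^{-t\rho}(p)$ and the appropriateness of $\mathcal V$ for granted, whereas you attempt to justify forward completeness and exponential contraction via Gronwall, and then glue. One small slip to fix: the bound you write, $|h(t)|\leqslant e^{-t}|h(0)|/(1-Ct|h(0)|)$, is not the right output of Gronwall here, since the denominator vanishes at finite $t$ and gives no control as $t\to\infty$. From $\dot h^k=-h^k-h^ih^jB_{ij}^k$ one gets $\tfrac{d}{dt}|h|^2\leqslant -2|h|^2+2C|h|^3$, hence $|h(t)|\leqslant e^{-t}|h(0)|$ as soon as $|h(0)|<1/C$, which is the estimate you actually want. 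With that correction your argument is complete and slightly more careful than the paper's.
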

\begin{proof}
It is sufficient to check
that
the fibration is trivial over an open neighborhood of any $p\in I$ (\cite{Lee} Definition 6.1 p.~257 ). 
We proved that for any $p\in I$, there is a local chart $(x,h)$ of $M$ around $p$ where $I=\{h=0\}$ and the vector field $\rho$ writes $h^j\partial_{h^j}$. 
In this chart, the fibration takes the trivial form
$$(x,h)\in\mathbb{R}^{n+d}\mapsto (x)\in\mathbb{R}^n.$$
\end{proof}
\begin{defi}
We define a subset $C_\rho$ as the union of the conormals of the leaves of the fibration $\pi^\rho:\mathcal{V} \mapsto I$. $C_\rho$ is 
a coisotropic set of $T^\star M$.
\end{defi}
\begin{figure} %on ouvre l'environnement figure
\begin{center}
\includegraphics[width=6cm]{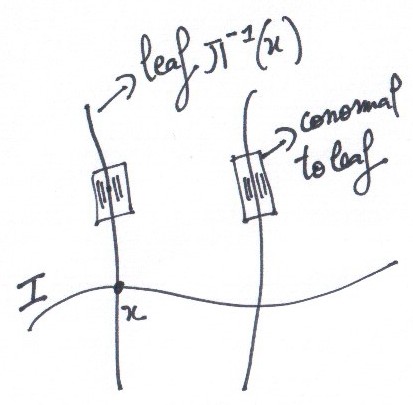} %ou image.png, .jpeg etc.
\caption{The representation of $C^\rho$ as a union of conormal bundles of the leaves of the foliation.} %la légende
%l'étiquette pour faire référence à cette image
\end{center}
\end{figure} %on ferme l'environnement figure
\paragraph{$C,C_\rho$ in local coordinates.}
In the sequel, we always work in local charts $(x,h)\in\mathbb{R}^{n+d}$ where $I=\{h=0\}$. 
We denote by $(x,h;k,\xi)$ the coordinates 
in cotangent space $T^\star\mathbb{R}^{n+d}$, 
where $k$ (resp $\xi$) is dual to $x$ (resp $h$). 
The scaling is defined by the Euler vector field 
$\rho=h^j\partial_{h^j}$. 
There is no loss of generality 
in reducing to this case 
because we proved that locally 
we can always reduce to this canonical situation (cf Chapter $1$).
In local coordinates $C=\{(x,0;0,\xi)| \xi\neq 0\}$ and $C_\rho=\{(x,h;k,0)| k\neq 0\}$. 
\begin{lemm}\label{lemmconorm}
Let $t\in\mathcal{D}^\prime(M\setminus I)$.
If $\overline{WF(t)}|_{I}\subset C$ 
then for any Euler $\rho$, 
there exists a neighborhood
$\mathcal{V}$ of $I$ 
for which $WF(t)|_{\mathcal{V}}\cap C_\rho=\emptyset$.
\end{lemm}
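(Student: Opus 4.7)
My plan is to argue locally around each point of $I$ and combine by taking unions, reducing to a contradiction based on a compactness extraction in the conic directions.

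First I would note that it suffices to establish the following local statement: for every $p\in I$ there is an open neighborhood $\mathcal{V}_p\subset M$ of $p$ such that $WF(t)|_{\mathcal{V}_p}\cap C_\rho=\emptyset$. Indeed, once this is known, $\mathcal{V}=\bigcup_{p\in I}\mathcal{V}_p$ is an open neighborhood of $I$ with the desired property. To get the local statement, I would use the local conjugation results of the previous section (Corollary \ref{conjugcoro} and the invariance lemmas) to pick a local chart $(x,h):U\to\mathbb{R}^{n+d}$ around $p$ in which $I\cap U=\{h=0\}$ and $\rho|_U=h^j\partial_{h^j}$. In these coordinates $C=\{(x,0;0,\xi)\mid \xi\neq 0\}$ and $C_\rho=\{(x,h;k,0)\mid k\neq 0\}$ so the two sets are disjoint (they only share the zero section, which is excluded).

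Next I would argue by contradiction. Suppose there is no neighborhood of $p$ for which the conclusion holds. Then I could find a sequence of points
\[
(x_n,h_n;k_n,0)\in WF(t)\cap C_\rho, \qquad (x_n,h_n)\to (x_p,0)=p,\quad k_n\neq 0.
\]
Since $WF(t)$ is conic in the fibre direction and $k_n\neq 0$, I may rescale and assume $|k_n|=1$. By compactness of the unit sphere in $\mathbb{R}^n$, after extracting a subsequence $k_n\to k_\infty$ with $|k_\infty|=1$. Therefore $(x_n,h_n;k_n,0)\to (x_p,0;k_\infty,0)$ in $T^\bullet M$ (the limit has nonzero covector component so it stays off the zero section). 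This limit point lies in the closure $\overline{WF(t)}$ taken in $T^\bullet M$, and since its base point is $(x_p,0)\in I$, it lies in $\overline{WF(t)}|_I$.

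Now I would invoke the hypothesis $\overline{WF(t)}|_I\subset C$: the limit point $(x_p,0;k_\infty,0)$ must therefore have the form $(x_p,0;0,\xi)$ with $\xi\neq 0$, which forces $k_\infty=0$, contradicting $|k_\infty|=1$. This yields the local statement, and hence the lemma after taking the union of the $\mathcal{V}_p$'s. The argument is essentially a compactness extraction; the only point requiring care is choosing the chart so that $\rho$ is put in its canonical form (so that the sets $C$ and $C_\rho$ have the expected coordinate descriptions) and making sure the limit is taken in $T^\bullet M$, not $T^*M$, so that the conic renormalization $|k_n|=1$ keeps us away from the zero section.
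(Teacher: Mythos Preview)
Your proof is correct and follows essentially the same approach as the paper's: localize in a chart where $\rho=h^j\partial_{h^j}$, argue by contradiction via a sequence in $WF(t)\cap C_\rho$ with base points approaching $I$, normalize the covectors to unit length, and extract a convergent subsequence whose limit violates $\overline{WF(t)}|_I\subset C$. Your write-up is in fact a bit more explicit than the paper's (which is quite terse), particularly in noting why the limit stays in $T^\bullet M$ and why $k_\infty\neq 0$ yields the contradiction.
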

\begin{proof}
Since the property we want to prove is open,
it is sufficient to establish it on some open neighborhood 
of any $p\in I$. So consider a local chart $(x,h):\Omega\mapsto \mathbb{R}^{n+d}$ where $p=(0,0),I=\{h=0\}$,
$\rho=h^j\partial_{h^j}$
and 
$\Omega$ is a compact set.
By a simple contradiction argument,
if for all $\vert h\vert\leqslant\varepsilon$,
$WF(t)|_{\Omega\cap\{0<\vert h\vert\leqslant \varepsilon\}}\cap C_\rho\neq\emptyset$,
we can find a sequence 
$(x_n,h_n;\frac{k_n}{\vert k_n\vert},0)$ in $WF(t)$ such that $(x_n,h_n)\in \Omega,h_n\rightarrow 0$,
then extracting a convergent 
subsequence yields a contradiction 
with the assumption
$\overline{WF(t)}|_{I}\subset C$. 
\end{proof}
\paragraph{Lifted flows on cotangent space.}
It will be crucial in the proof of Theorem \ref{mainthm}
to control 
the wave front of the extension 
to understand the dynamics of 
the lift of the Euler flow on cotangent space.
When we scale a distribution $t$ by the one-parameter family $\Phi_\lambda=e^{\log\lambda\rho\star}$, 
we need to compute  the wave front of $\Phi_\lambda^*t$. This is described by the pull-back theorem of H\"ormander (see \cite{Hormander} Theorem 8.2.4) as the image of $WF(t)$ by the flow $T^*\Phi^{-1}_\lambda$.
\paragraph{Two interpretations of the lifted flow in cotangent space.}
We give here two points of view on this lifting. In the first one, the sections of the cotangent bundle are viewed as sections of the bundle of one forms $\Omega^1(M)$.
The second interpretation is more in the spirit of symplectic geometry and will be useful for the microlocal interpretation of the flow (see Chapter $5$). 
\begin{enumerate}
\item $\rho$ defines a flow on $M$ and, as any diffeomorphism, this flow can be lifted to the cotangent space $T^*M$.
Actually any diffeomorphism $\Phi:M\mapsto M$ lifts by the formula
\begin{equation}  
T^\star \Phi:(x,\eta) \mapsto \left(\Phi(x),  \eta\circ d\Phi^{-1}|_{\Phi(x)} \right)
\end{equation}
which in coordinates representation $(x,h)\mapsto (x,\lambda h)$ in $\mathbb{R}^{n+d}$ 
reads:
$$ (x,h;k,\xi)\in T^\star \mathbb{R}^{n+d}\mapsto (x,\lambda h;k,\lambda^{-1}\xi)\in T^\star \mathbb{R}^{n+d}.$$
\item The symbol of the differential operator $\rho$ is $\sigma(\rho)=-ih^j\xi_j$. 
We compute
its symplectic gradient  $\sigma(\rho)\in C^\infty (T^\star M)$ for the symplectic form $i(dk\wedge dx + d\xi\wedge dh)$
$$h^j\partial_{h^j}-\xi_j\partial_{\xi_j} ,$$
and we take the flow of this vector field (for more on the symbol map see \cite{Eskin} p.~198) .   
\end{enumerate}
Experts in microlocal analysis use this lifted flow 
in the ``Change-of-variables formula'' for pseudodifferential operators, 
see the formula at the bottom of p.~222 in \cite{Eskin} and Formula 61.20 p.~334 in \cite{Eskin}.
\section{Geometric and metric topological properties of $\Gamma$.}
We work in $\mathbb{R}^{n+d}$ with coordinates $(x,h)$, $I=\mathbb{R}^n\times\{0\}$ is the linear subspace $\{h=0\}$, the scaling
is given by the vector field
$\rho=h^j\frac{\partial}{\partial h^j}$
and we use the notation $f_\lambda(x,h)=f(x,\lambda h)$.
We restrict to a compact set $K$ which is $\rho$-convex.
The goal of the first part is to find conditions on $\Gamma$ so that $\forall\lambda\in(0,1], WF(t_\lambda)\subset \Gamma$.
We first use the pull-back theorem of H\"ormander to describe $WF(t_\lambda)$.
\paragraph{The pull back theorem of H\"ormander.}
Recall the definition of $\Phi^*\Gamma$ for $\Phi:X\mapsto Y$ a smooth diffeomorphism beetween two smooth manifolds $(X,Y)$ and $\Gamma \in T^\bullet Y$, $$\Phi^*\Gamma=\{ (x;\xi\circ d\Phi_x) |  (\Phi(x);\xi)\in\Gamma \}.$$
In the case $\Phi$ is a diffeomorphism, $\Phi$ is invertible and we have the simpler formula:
$$\Phi^*\Gamma=\{\Phi^{-1}(y);\xi \circ D\Phi_{\Phi^{-1}(y)} \vert \;  (y;\xi)\in \Gamma \} .$$
For $\Phi(\lambda): (x,h)\mapsto (x,\lambda h)$, we thus have
$$\Phi(\lambda)^{ *}\Gamma=\{(x,\lambda^{-1}h,k,\lambda\xi) \vert (x,h;k,\xi)\in\Gamma   \} $$
and also
$\Phi(\lambda)^{ *}\Gamma|_K= \{(x,h;k,\xi) \vert   (x,\lambda h, k, \lambda^{-1}\xi)\in \Gamma, (x,h)\in K  \}=\Phi(\lambda)^{ *}\Gamma\cap \left(K\times (\mathbb{R}^{n+d})^\star\right).$
%$$K_\lambda\times V_\lambda= \{(x,h;k,\xi) \vert   (x,\lambda h, k, \lambda^{-1}\xi)\in K\times V \}=\{(x,\lambda^{-1}h,k,\lambda\xi) \vert   (x, h, k, \xi)\in K\times V\}.$$
If $t\in \mathcal{D}^\prime_\Gamma$ then $\Phi^\star t\in  \mathcal{D}^\prime_{\Phi^\star\Gamma}$
by 
application of the pull-back theorem of H\"ormander (8.2.4 in \cite{Hormander} or \cite{Eskin} theorem 63.1) where H\"ormander uses the notation $^td\Phi_x\xi$ for $\xi\circ d\Phi_x$.
% It is immediate that for
%$\Gamma_\lambda$ and
%$K_\lambda\times V_\lambda$
%$$\forall \lambda>0 ,  \left(K_\lambda\times V_\lambda\right) \cap \Gamma_\lambda=\emptyset$$
%because if two sets $A$ and $B$ are disjoint then their image $\Phi(A),\Phi(B)$ under any diffeomorphism $\Phi$
%are disjoint. 

% As a matter of fact, we are more interested in $\Gamma_\lambda|_K$ 
%
% If $K$ is geodesically convex, then it is interesting to notice $\forall \lambda\in[0,1],K\subset K_\lambda, \Gamma_\lambda|_K\subset \Gamma_\lambda$
% $$K\times V_\lambda \cap \Gamma_\lambda =\emptyset\implies K\times V_\lambda \cap \Gamma_\lambda|_K=\emptyset  $$

%\begin{lemm}
%Let $K$ a compact set and $V\subset \mathbb{R}^{n+d*}$ be such that $\left(K\times V\right) \cap \Gamma|_K=\emptyset$, then set 
%$\Gamma_\lambda|_K= \{(x,h,k,\xi) \vert   (x,\lambda h, k, \lambda^{-1}\xi)\in \Gamma, (x,h)\in K  \}$, we claim if $K$ is geodesically convex then
%\begin{equation}
%\forall\lambda , K\times V_\lambda\cap \Gamma_\lambda|_K=\emptyset
%\end{equation}
%\end{lemm}
\paragraph{The fundamental equation.} 
We wish actually to compute $\bigcup_{\lambda\in(0,1]}WF(t_\lambda)$.
Let $U$ be any $\rho$-convex subset of $M$. 
We construct a geometric upper bound $\Gamma_M(WF(t))$ such that $\bigcup_{\lambda\in(0,1]}WF(t_\lambda)|_U\subset\Gamma_M(WF(t))$, where $\Gamma_M(WF(t))$ has a transparent geometrical meaning.
\begin{defi}\label{defigammamajorant}
Let $\rho$ be a Euler vector field and $U$ a $\rho$-convex subset of $M$.
Let $WF(t)$ be given, then the set $\Gamma_M(WF(t))|_U$ is defined as the union of all curves of the flow $\lambda\mapsto T^\star(e^{\log\lambda\rho})$ which intersect 
$WF(t)$ and the projection on the base space of which lie in $U$.
Let $T$ be the maximal time 
of existence of the flow $e^{\log\lambda\rho}$
\begin{equation}\label{geomcharac}
\Gamma_M(WF(t))|_U=\{T^\star e^{\log\lambda\rho}(p) \vert p\in WF(t),\lambda\in(0,T) \}\cap T^\bullet U. 
\end{equation}
% The idea of the construction of $\Gamma_M$ can be adapted to construct a simple geometric cone $\Gamma_{WF(t)}$ from the data of $WF(t)$.
%\begin{equation}\label{fromWFt}
%\Gamma_{WF(t)}|_U=\{\Phi_\lambda(p) \vert p\in WF(t),\lambda\in(0,1],\Phi_\lambda(p)\in T^\bullet U \} 
%\end{equation}
\end{defi}
\begin{figure} %on ouvre l'environnement figure
\begin{center}
\includegraphics[width=6cm]{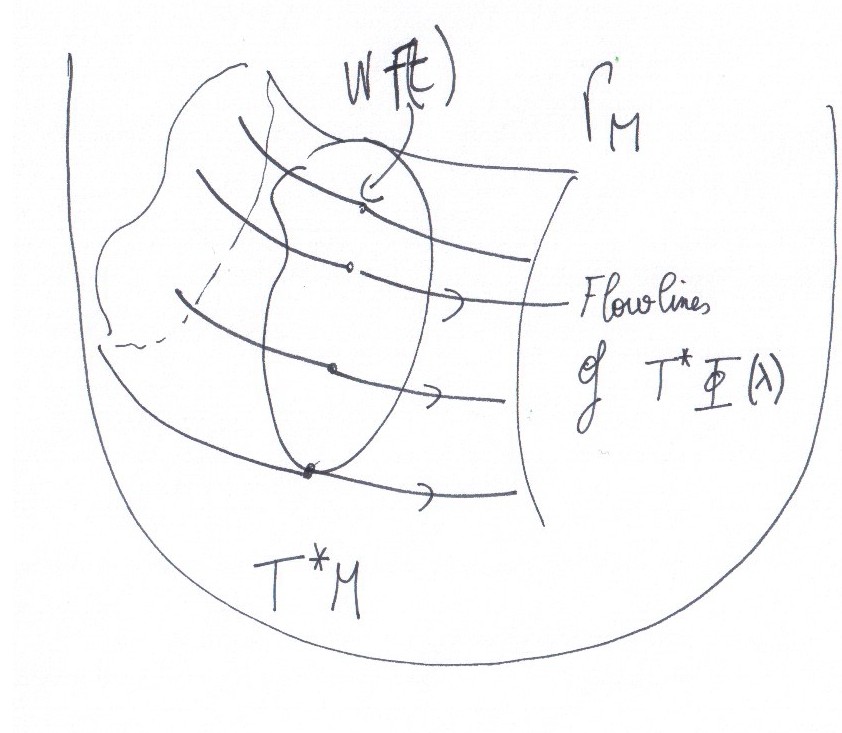} %ou image.png, .jpeg etc.
\caption{$\Gamma_M$ as the union of all flowlines intersecting $WF(t)$.} %la légende
%l'étiquette pour faire référence à cette image
\end{center}
\end{figure} %on ferme l'environnement figure
$\Gamma_M(WF(t))|_U$ is also
defined as the smallest subset of $T^\star_U M$
which contains $WF(t)\cap T^\star_U M$ and
which is stable by $T^\star e^{\log\lambda\rho}$ for
$\lambda\in(0,1]$.
It is \textbf{entirely determined} by $\rho$ and $WF(t)$.
\begin{prop}
For all $\lambda\in(0,1]$, $WF(t_\lambda)|_U\subset\Gamma_M(WF(t))|_U$.
\end{prop}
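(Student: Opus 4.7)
The plan is to reduce the statement to a direct application of Hörmander's pull-back theorem combined with the explicit description of the lifted flow on $T^\star \mathbb{R}^{n+d}$ already recalled in the excerpt. Since everything is local and transports under the conjugation results of Chapter~1, I work in a chart $(x,h)\in\mathbb{R}^{n+d}$ with $I=\{h=0\}$ and canonical Euler $\rho=h^j\partial_{h^j}$, so that the scaling is $\Phi(\lambda):(x,h)\mapsto(x,\lambda h)$ and $t_\lambda=\Phi(\lambda)^\star t$.

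First, I would apply the pull-back theorem to $\Phi(\lambda)$: the formula already derived in the excerpt gives
\[
WF(t_\lambda)\subset \Phi(\lambda)^\star WF(t)=\{(x,h;k,\xi)\mid (x,\lambda h;k,\lambda^{-1}\xi)\in WF(t)\}.
\]
Hence for any $p=(x,h;k,\xi)\in WF(t_\lambda)|_U$, the point $q:=(x,\lambda h;k,\lambda^{-1}\xi)$ belongs to $WF(t)$.

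Second, I would use the explicit formula for the lifted flow recorded earlier, namely
\[
T^\star e^{\log\mu\rho}:(x,h;k,\xi)\longmapsto (x,\mu h;k,\mu^{-1}\xi).
\]
Setting $\mu=\lambda^{-1}$ gives $T^\star e^{\log\lambda^{-1}\rho}(q)=(x,\lambda^{-1}\!\cdot\!\lambda h;k,\lambda\!\cdot\!\lambda^{-1}\xi)=p$. Therefore $p$ lies on the orbit of the point $q\in WF(t)$ under the lifted flow, at time $\mu=\lambda^{-1}$. Because $\rho=h^j\partial_{h^j}$ is complete on $\mathbb{R}^{n+d}$, the maximal time $T$ of existence is $+\infty$ in the chart, so $\mu=\lambda^{-1}\in(0,T)$ is always admissible (in a general chart, for $\lambda\in(0,1]$ the $\rho$-convexity of $U$ together with completeness of the Euler flow for positive times handles this without trouble). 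Combining with $p\in T^\bullet U$ by assumption, the geometric characterization \eqref{geomcharac} of $\Gamma_M(WF(t))|_U$ yields $p\in\Gamma_M(WF(t))|_U$.

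There is no real obstacle: the content is just unwinding the pull-back formula and matching it with the action of $T^\star e^{\log\mu\rho}$, the only small point to verify being that the required parameter $\lambda^{-1}$ is in the range of definition of the flow, which follows from $\rho$-convexity of $U$ in the general setting, or from completeness of $\rho=h^j\partial_{h^j}$ in the chart. The global statement on a manifold then follows from the local one by the gluing/invariance properties established in Chapter~1.
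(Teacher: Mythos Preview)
Your proof is correct and follows exactly the approach the paper intends: the paper's own proof is the single sentence ``This is immediate from the definition of $\Gamma_M(WF(t))$ and the pullback theorem,'' and you have simply unwound that sentence in full detail by applying H\"ormander's pull-back formula and matching it with the explicit action of $T^\star e^{\log\mu\rho}$ to exhibit $p$ as a point on the flow orbit through some $q\in WF(t)$.
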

This is immediate from the definition of $\Gamma_M(WF(t))$ and the pullback theorem.
In the sequel, we use a local chart to 
identify 
a neighborhood of $p\in I$ with 
the $\left(h^j\frac{\partial}{\partial h^j}\right)$-convex
set
$U=\{ 0<\vert h\vert \leqslant \varepsilon,x\in K \}$ 
for some $\varepsilon$ and
where $K$ is a compact set of $\mathbb{R}^n$.
We want to describe geometrically the set $\Gamma_M(WF(t))$. 
The intuitive idea 
is that it is enough to know $\Gamma_M(WF(t))$ 
on a vertical slice 
$\{\vert h\vert=\varepsilon\}$ just by following 
the integral curves of the flow intersecting 
$\Gamma_M(WF(t))|_{\vert h\vert=\varepsilon}$.   
We solve a Cauchy problem for the set $\Gamma_M$, in the sense that we fix some geometric Cauchy data $\Gamma_M|_{\vert h\vert=\varepsilon}$ on the boundary $\{\vert h\vert=\varepsilon\}$ of the domain then we use the geometric characterization of $\Gamma_M|_U$ given by equation (\ref{geomcharac}). It is a geometric version of \emph{the method of characteristics} 
in PDE.
\begin{figure} %on ouvre l'environnement figure
\begin{center}
\includegraphics[width=6cm]{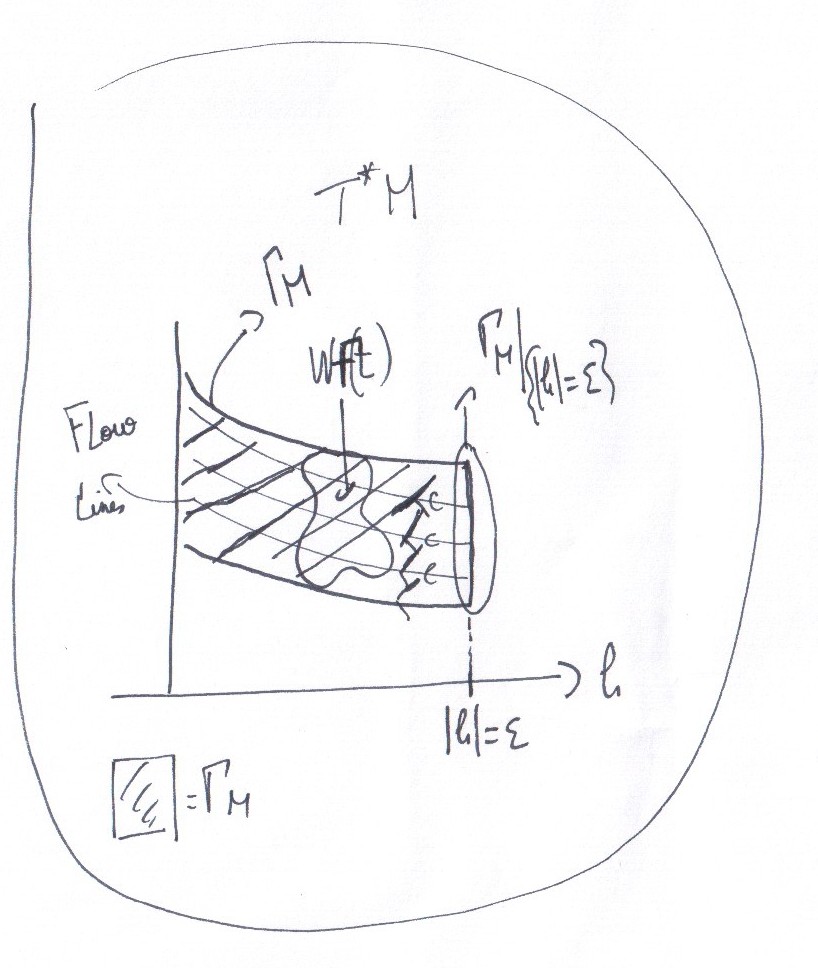} %ou image.png, .jpeg etc.
\caption{The WF(t), the foliation of $\Gamma_M$ by flowlines and the restriction over $\vert h\vert=\varepsilon$.} %la légende
%l'étiquette pour faire référence à cette image
\end{center}
\end{figure} %on ferme l'environnement figure
\begin{prop}\label{majorantlimits}
Let $U=\{(x,h)| 0<\vert h\vert \leqslant \varepsilon,x\in K \}\subset \mathbb{R}^{n+d}$ 
where $K$ is a compact subset of $\mathbb{R}^n$ 
and for some $\varepsilon>0$.
Let $\Gamma_M(WF(t))|_U$ be defined by Definition (\ref{defigammamajorant}). 
Then
$\Gamma_M(WF(t))|_{U\cap \{\vert h\vert=\varepsilon\}}$ entirely determines $\Gamma_M(WF(t))|_{U}$ by the equation:
% and also $\overline{\Gamma_M|_{0<\vert h\vert\leqslant\varepsilon}}$. 
\begin{eqnarray}
\Gamma_M(WF(t))|_{U}=\{T^\star\Phi_\lambda(p) \vert p\in\Gamma_M(WF(t))|_{U\cap \{\vert h\vert=\varepsilon\}}, 0<\lambda\leqslant 1 \}. 
%\\ \overline{\Gamma_M|_{0<\vert h\vert\leqslant\varepsilon}}=\{T^\star\Phi_\lambda(p) \vert p\in\Gamma_M|_{\vert h\vert=\varepsilon},\lambda\in[0,1] \} 
\end{eqnarray}
\end{prop}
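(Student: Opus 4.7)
The plan is to establish the displayed equality as a double inclusion, exploiting the fact that, under the canonical Euler flow $\Phi_\lambda:(x,h)\mapsto(x,\lambda h)$, each flowline inside $U$ crosses the transversal slice $\{|h|=\varepsilon\}$ in exactly one point. Thus the slice plays the role of a geometric ``initial-condition surface'' for the set-valued Cauchy problem defining $\Gamma_M(WF(t))|_U$.

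For the inclusion $\supset$, let $p\in\Gamma_M(WF(t))|_{U\cap\{|h|=\varepsilon\}}$ and $\lambda\in(0,1]$. By definition, $p$ lies on the lifted flow orbit of some point $p_0\in WF(t)$, and $T^\star\Phi_\lambda(p)$ lies on the same orbit, hence comes from $WF(t)$. The only thing left to check is that its base point $(x,\lambda h)$ is in $U$: since $x\in K$ and $|\lambda h|=\lambda\varepsilon\in(0,\varepsilon]$, this is immediate from the $\rho$-convexity of $U$.

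For the inclusion $\subset$, take $q\in\Gamma_M(WF(t))|_U$, with base point $(x_q,h_q)$ satisfying $x_q\in K$ and $0<|h_q|\leqslant\varepsilon$. By the geometric characterization \eqref{geomcharac}, there exist $p_0\in WF(t)$ and $\mu>0$ with $q=T^\star\Phi_\mu(p_0)$. Set $\lambda=|h_q|/\varepsilon\in(0,1]$ and define $p=T^\star\Phi_{\lambda^{-1}}(q)$. Using the explicit form $T^\star\Phi_\mu(x,h;k,\xi)=(x,\mu h;k,\mu^{-1}\xi)$, the base point of $p$ is $(x_q,\lambda^{-1}h_q)$, which lies in $U\cap\{|h|=\varepsilon\}$ by construction. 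Moreover, composing flows gives $p=T^\star\Phi_{\mu/\lambda}(p_0)$, so $p$ lies on the same orbit of a point of $WF(t)$, hence $p\in\Gamma_M(WF(t))|_{U\cap\{|h|=\varepsilon\}}$. The identity $q=T^\star\Phi_\lambda(p)$ then places $q$ in the right-hand side.

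There is no real obstacle here: the statement is a purely geometric reorganization of the definition of $\Gamma_M$, with the $\rho$-convexity of $U$ ensuring that the flow parameter connecting a point of $U$ to the slice lies in $(0,1]$ (resp. $[1,\infty)$). The only mild point of care is that $U$ excludes $I=\{h=0\}$, so $|h_q|>0$ and the rescaling parameter $\lambda=|h_q|/\varepsilon$ is well defined; this is precisely why we may restrict attention to $T^\bullet U$ rather than having to worry about the behaviour of flowlines as they approach $I$.
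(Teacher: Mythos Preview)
Your proof is correct and follows essentially the same approach as the paper: both arguments rest on the observation that every flow orbit through $\Gamma_M(WF(t))|_U$ intersects the slice $\{|h|=\varepsilon\}$ in exactly one point, so the slice serves as initial data for the whole set. The paper states this in two sentences, while you spell out the double inclusion explicitly with the rescaling parameter $\lambda=|h_q|/\varepsilon$; there is no substantive difference.
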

\begin{proof}
By definition, $\Gamma_M(WF(t))|_{U}$ is fibered by curves
$\Gamma_M(WF(t))|_{U}=\{\Phi_\lambda(p)| p\in \Gamma_M(WF(t))|_{U},\lambda\in(0,1]\}\cap T^\bullet \{0<\vert h\vert\leqslant\varepsilon\}$. 
Each of these curves must intersect the boundary $\vert h\vert=\varepsilon$ in $T^\bullet U$ hence $\Gamma_M(WF(t))|_{U}$ is the set of all curves $\left(T^\star\Phi_\lambda(p)\right)_{0<\lambda \leqslant 1}$ for $p\in\Gamma_M(WF(t))|_{U\cap \{\vert h\vert=\varepsilon\}}$.
% We work in the cosphere bundle $\mathbb{S}^*U$ over $U$ and we therefore have an induced flow $S\Phi:(\lambda,p)\in [0,1]\times\mathbb{S}^*U\mapsto S\Phi_\lambda(p)\in\mathbb{S}^*U$.
% 
%  In coordinates $S\Phi:(\lambda,x,h;k,\xi)\mapsto (x,\lambda h;\frac{k}{\sqrt{\vert k\vert^2 +\vert\lambda^{-1}\xi\vert^2}},\frac{\lambda^{-1}\xi}{\sqrt{\vert k\vert^2 +\vert\lambda^{-1}\xi\vert^2}})$, $\Phi$ is a continuous map. 
% 
% Then for any convergent sequence of points $(p_n)_n\in \Gamma_M|_{0<\vert h\vert\leqslant\varepsilon}^{\mathbb{N}}$ such that $p_n\rightarrow p\in \mathbb{S}^* U|_I$, we associate the sequence $q_n$ of Cauchy data in $\Gamma_M|_{\vert h\vert=\varepsilon}^{\mathbb{N}}$ such that there exists a sequence of positive numbers $(\lambda_n)_{n\in\mathbb{N}},\lambda_n\rightarrow 0$ such that $p_n=\Phi_{\lambda_n}(q_n)$. 
%  Then we can extract a subsequence $q_n$ which converges to $q$ in $\mathbb{S}^*U$.   
%
% By continuity $p=\lim_{n\rightarrow\infty}\Phi_{\lambda_n}(q_n)=\Phi_0(q)$. So the limit point $p$ can be written as 
%$$p=\lim_{\lambda\rightarrow 0} \Phi_\lambda(q)$$ 
\end{proof}
For a given cone $WF(t)$ 
and $\Gamma_M(WF(t))$ defined 
by the equation (\ref{geomcharac}),
we believe it is natural
to demand that $\overline{\Gamma_M}|_I$
is contained in the conormal $C$
because this
ensures that $\Gamma_M(WF(t))$ 
never meets $C_\rho$ 
for arbitrary choices of 
generalized Euler 
vector fields $\rho$.
This condition 
is crucial for QFT 
because it ensures 
that counterterms are 
conormal distributions supported on $I$,
we will discuss
this in Theorem 
(\ref{Grosthmconormcontretermes}).  
We introduce a local condition
on $WF(t)$ named
local
\textbf{soft landing condition} 
at $p$
which
ensures
that for some 
neighborhood $V_p$ 
of $p$,
$\overline{\Gamma_M(WF(t))}|_{I\cap V_p}\subset C$:
\begin{defi}\label{Broudermetric}
$WF(t)$ satisfies the 
\textbf{soft landing condition} 
at $p$ 
if there exists $\rho$ and a local chart $(x,h)\in C^\infty(U,\mathbb{R}^{n+d}),I=\{h=0\}$ 
at $p\in U$ for which $\rho=h^j\frac{\partial}{\partial h^j}$ and such that 
\begin{equation}
\exists\varepsilon>0,\exists\delta>0, WF(t)|_{U\cap\{\vert h\vert\leqslant \varepsilon\}}\subset\{\vert k\vert\leqslant \delta\vert h\vert\vert\xi\vert \}.
\end{equation}    
\end{defi}
\begin{figure} %on ouvre l'environnement figure
\begin{center}
\includegraphics[width=6cm]{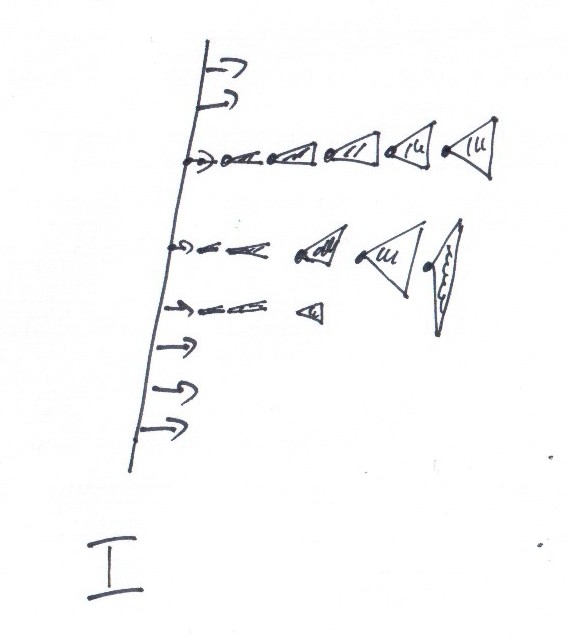} %ou image.png, .jpeg etc.
\caption{The soft landing condition forces the elements of WF to converge to the conormal of $I$.} %la légende
%l'étiquette pour faire référence à cette image
\end{center}
\end{figure} %on ferme l'environnement figure
Notice that the scale invariance
of estimate 
$\vert k\vert\leqslant \delta\vert h\vert\vert\xi\vert$
implies 
the stability
of the soft landing condition
by scaling with 
$\rho=h^j\frac{\partial}{\partial h^j}$.
The above definition
depends
on the choice of $\rho$,
however 
since by \ref{conjugcoro},
two Eulers $\rho_1,\rho_2$
are always locally
conjugated
by an element $\Psi$
of the pseudogroup 
$G$,
$\Psi$
transforms
the Euler 
by pushforward,
$\Psi_\star \rho_1=\rho_2$,
and the local chart
by pullback.
To prove that
the local soft 
landing condition
does not depend 
on the choice of
Euler vector field,
it suffices
to prove
$\Gamma$ satisfies
the local 
soft landing condition
at $p$ implies
$\Psi(\Gamma)$
satisfies the soft landing
condition
at $\Psi(p)$
for all $\Psi\in G$.

\subsubsection{The soft landing condition is stable by action of $G$.}
We prove
in Propositions \ref{HeleinmicroscopicI} 
that
the soft landing condition
is locally
stable by the action
of the pseudogroup
$G$ of local
diffeomorphisms
fixing $I$.
\paragraph{The geometric reformulation of the soft landing
condition.}
We are led to reformulate 
the local soft landing condition 
in a more geometric flavor which, 
once established,
makes the claim
of stability rather
trivial.
We denote by 
$\mathbb{U}^\star \mathbb{R}^{n+d}$
the unit cosphere bundle.
Let $\pi_1:(x,h;k,\xi )\in\mathbb{U}^\star \mathbb{R}^{n+d}\mapsto (x,h)\in\mathbb{R}^{n+d}$ and $\pi_2:(x,h;k,\xi )\in\mathbb{U}^\star \mathbb{R}^{n+d}\mapsto (k,\xi)\in\mathbb{U}^{n+d-1}$.
We introduce the following distance on the cosphere bundle $d_{\mathbb{U}^\star \mathbb{R}^{n+d}}\left(p,q\right)=d_{\mathbb{R}^{n+d}}(\pi_1(p),\pi_1(q))+d_{\mathbb{U}^{n+d-1}}(\pi_2(p),\pi_2(q))$.
Let us consider $\mathbb{U}\Gamma$ 
the trace of $\Gamma$ on 
$\mathbb{U}^\star \mathbb{R}^{n+d}$ 
and also $\mathbb{U}C$ 
the trace of the 
conormal bundle of $I$ 
in $\mathbb{U}^\star \mathbb{R}^{n+d}$.
\begin{defi}\label{intrinsicslc}
The set $\Gamma$ satisfies the local soft landing condition 
on $U$ 
if and only if for any element $p\in\mathbb{U}\Gamma$ 
such that $\pi_1(p)\in U$, 
the distance of $p$ with 
the conormal trace $\mathbb{U}C$ 
is controlled by 
the distance beetween $\pi_1(p)$ and $I$:
$$\forall K\subset U,\exists \delta,\forall p\in \Gamma_S,\pi_1(p)\in K, d_{\mathbb{S}^\star \mathbb{R}^{n+d}}(p,C_S)\leqslant \delta d_{\mathbb{R}^{n+d}}(\pi_1(p),I) .$$
\end{defi}
We will quickly explain the equivalence of this definition 
with the definition (\ref{Broudermetric}),
$$\vert k\vert\leqslant \delta\vert h\vert\vert\xi\vert\Leftrightarrow  
\frac{\vert k\vert}{\vert\xi\vert}
\leqslant \delta\vert h\vert
$$
$$\Leftrightarrow \vert\tan \theta((k,\xi);(0,\xi))\vert\leqslant \delta\vert h\vert \implies  \vert \theta((k,\xi);(0,\xi))\vert\leqslant \delta^\prime\vert h\vert  $$
$$\implies d_{\mathbb{S}^{n+d-1}}(\pi_2(p),\pi_2(C))\leqslant \delta^\prime d_{\mathbb{R}^{n+d}}(\pi_1(p),I) $$
$$\implies d_{\mathbb{S}^\star \mathbb{R}^{n+d}}\left(p,C_S\right)=d_{\mathbb{S}^{n+d-1}}(\pi_2(p),\pi_2(C))+d_{\mathbb{R}^{n+d}}(\pi_1(p),I) $$ 
$$\leqslant (1+\delta^\prime) d_{\mathbb{R}^{n+d}}(\pi_1(p),I) .$$
Conversely,
$$ d_{\mathbb{S}^\star \mathbb{R}^{n+d}}(p,C_S)\leqslant \delta d_{\mathbb{R}^{n+d}}(\pi_1(p),I) $$
$$\implies d_{\mathbb{S}^{n+d-1}}(\pi_2(p),\pi_2(C)) \leqslant \delta d_{\mathbb{R}^{n+d}}(\pi_1(p),I)$$
$$\implies\vert \theta((k,\xi);(0,\xi))\vert\leqslant \delta\vert h\vert $$
$$\implies \vert \tan\theta((k,\xi);(0,\xi))\vert\leqslant \delta^\prime\vert h\vert 
\implies \frac{\vert k\vert}{\vert\xi\vert}\leqslant \delta^\prime\vert h\vert .$$
\paragraph{The invariance by $G$.}
The geometrical reformulation
in terms of distance
combined with Proposition
\ref{Idlift}
makes obvious the following claim:
\begin{prop}\label{HeleinmicroscopicI} 
Let $\Psi: U \mapsto U$ be a local diffeomorphism in $G$, $\sigma=T^\star \Psi$ be 
the corresponding lift on $T^\star U$ and $\Gamma$ be a closed conic set in $T^\bullet M$.
Then if $\Gamma$ satisfies the local \textbf{soft landing condition} at $\pi_1\circ\sigma(p)\in U$, then $\sigma\circ \Gamma$ satisfies the local \textbf{soft landing condition} at $\pi_1\circ\sigma(p)$.
\end{prop}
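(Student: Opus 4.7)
The plan is to exploit the intrinsic reformulation of the soft landing condition given in Definition \ref{intrinsicslc}, which is stated purely in terms of distances on the cosphere bundle $\mathbb{U}^\star\mathbb{R}^{n+d}$ and the ambient distance to $I$. Once the condition has been phrased this way, its stability under $T^\star\Psi$ boils down to the fact that $\sigma$ and $\Psi$ are Lipschitz on compact sets and that both $I$ and the conormal $C$ are preserved.

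First, I would record three elementary but essential facts. (i) Since $\Psi\in G$ preserves $I$, its lift $\sigma=T^\star\Psi$ preserves the conormal bundle: if $\xi$ annihilates $T_pI$ and $\Psi(p)\in I$, then $\xi\circ d\Psi^{-1}|_{\Psi(p)}$ annihilates $T_{\Psi(p)}I$ because $d\Psi^{-1}$ sends $T_{\Psi(p)}I$ into $T_pI$. Normalizing to unit length, the induced map $\tilde\sigma$ on $\mathbb{U}^\star\mathbb{R}^{n+d}$ preserves the unit trace $\mathbb{U}C$. (ii) On any compact neighborhood of $\pi_1(p)$ in $\mathbb{U}^\star\mathbb{R}^{n+d}$, the smooth diffeomorphism $\tilde\sigma$ is Lipschitz with some constant $L_1$ for the product distance $d_{\mathbb{U}^\star\mathbb{R}^{n+d}}=d_{\mathbb{R}^{n+d}}+d_{\mathbb{U}^{n+d-1}}$. (iii) Since $\Psi$ and $\Psi^{-1}$ preserve $I$ and are smooth, on compact sets there exist constants $c_1,c_2>0$ with
\[
c_1\,d_{\mathbb{R}^{n+d}}(x,I)\leqslant d_{\mathbb{R}^{n+d}}(\Psi(x),I)\leqslant c_2\,d_{\mathbb{R}^{n+d}}(x,I),
\]
which is a direct consequence of Hadamard's lemma applied to the components of $\Psi$ and $\Psi^{-1}$ that vanish on $I$.

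Now the estimate is a short chain. Let $q'\in\mathbb{U}(\sigma\Gamma)$ with $\pi_1(q')$ in a fixed compact neighborhood $K'$ of $\Psi(\pi_1(p))$, and write $q'=\tilde\sigma(q)$ with $q\in\mathbb{U}\Gamma$ and $\pi_1(q)\in\Psi^{-1}(K')$ compact. Using that $\tilde\sigma(\mathbb{U}C)=\mathbb{U}C$ and the Lipschitz bound on $\tilde\sigma$,
\[
d_{\mathbb{U}^\star\mathbb{R}^{n+d}}(q',\mathbb{U}C)=\inf_{c\in\mathbb{U}C}d_{\mathbb{U}^\star\mathbb{R}^{n+d}}(\tilde\sigma(q),\tilde\sigma(c))\leqslant L_1\, d_{\mathbb{U}^\star\mathbb{R}^{n+d}}(q,\mathbb{U}C).
\]
Applying the soft landing condition for $\Gamma$ and then step (iii) above,
\[
d_{\mathbb{U}^\star\mathbb{R}^{n+d}}(q',\mathbb{U}C)\leqslant L_1\,\delta\, d_{\mathbb{R}^{n+d}}(\pi_1(q),I)\leqslant L_1\,\delta\, c_2\, d_{\mathbb{R}^{n+d}}(\pi_1(q'),I),
\]
which is exactly the soft landing condition for $\sigma\Gamma$ on $K'$ with constant $\delta'=L_1 c_2\delta$.

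I expect the only mildly delicate point to be the Lipschitz bound for $\tilde\sigma$ on the cosphere bundle, since the normalization $\xi\mapsto\xi/\vert\xi\vert$ is not globally Lipschitz on $T^\star\mathbb{R}^{n+d}$; however, on a compact subset of $\mathbb{U}^\star\mathbb{R}^{n+d}$ the operator norms of $d\Psi$ and $d\Psi^{-1}$ are bounded away from $0$ and $\infty$, so $\tilde\sigma$ is smooth and hence Lipschitz there. Everything else is a bookkeeping of constants, and the crucial geometric inputs---invariance of $I$ and of $C$ under $\Psi$ and $\sigma$---are immediate from $\Psi\in G$.
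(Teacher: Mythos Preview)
Your argument is correct and is precisely the fleshing-out the paper has in mind: the text offers no proof beyond the remark that the distance reformulation of Definition~\ref{intrinsicslc} together with Proposition~\ref{Idlift} ``makes obvious'' the claim, and your Lipschitz chain is exactly how one makes this obvious. One cosmetic slip: in the last inequality you need $d(\pi_1(q),I)\leqslant c_1^{-1}\,d(\pi_1(q'),I)$, which comes from the \emph{lower} bound in (iii), so the final constant should read $L_1\delta/c_1$ rather than $L_1\delta c_2$.

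One small difference worth noting: the paper cites Proposition~\ref{Idlift}, which for the particular diffeomorphisms $\Psi=\Phi(0)$ conjugating two Eulers gives the stronger statement $T^\star\Psi|_{(TI)^\perp}=\mathrm{Id}$. You instead prove directly that any $\Psi\in G$ satisfies $\sigma(C)=C$, which is all the distance argument actually needs; your route therefore establishes the proposition for arbitrary $\Psi\in G$, not only for those arising from Corollary~\ref{conjugcoro}.
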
 
By \ref{conjugcoro}, this implies:
\begin{prop}
If $WF(t)$ satisfies the soft landing condition locally at $p$ for some $\rho$ and some associated chart, then
for any local chart $(x,h)\in C^\infty(U,\mathbb{R}^{n+d}),I=\{h=0\}$ and associated Euler
$\rho=h^j\frac{\partial}{\partial h^j}$,
$WF(t)$ satisfies the soft landing condition locally at $p$.
\end{prop}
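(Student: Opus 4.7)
The plan is to reduce the statement to the coordinate-free formulation of the soft landing condition established in Definition~\ref{intrinsicslc} and then invoke the $G$-invariance already proved in Proposition~\ref{HeleinmicroscopicI}. By hypothesis, there exist a chart $\kappa_1:V_1\to\mathbb{R}^{n+d}$ sending $I$ to $\{h=0\}$ and the associated canonical Euler $\rho_1=h^j\partial_{h^j}$, in which the estimate $|k|\leqslant\delta|h||\xi|$ holds near $p$. Pick any other chart $\kappa_2:V_2\to\mathbb{R}^{n+d}$ with the same property and its own canonical Euler $\rho_2$. Shrinking $V_1$ and $V_2$ to a common open neighborhood $V$ of $p$, the transition map $\Psi=\kappa_2\circ\kappa_1^{-1}$ is a local diffeomorphism of $\mathbb{R}^{n+d}$ sending $\{h=0\}$ to $\{h=0\}$, so $\Psi\in G$.

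Next, I would rely on the equivalence between the metric formulation of Definition~\ref{Broudermetric} and the intrinsic cosphere-bundle formulation of Definition~\ref{intrinsicslc}, which was already established in the chain of inequalities immediately following~\ref{intrinsicslc}. In the intrinsic form, the soft landing condition is a purely geometric statement about how $\mathbb{U}WF(t)$ approaches the conormal trace $\mathbb{U}C$, phrased only in terms of the Riemannian distance on the cosphere bundle near $p$. Since $\Psi\in G$, Proposition~\ref{HeleinmicroscopicI} says exactly that the image $T^\star\Psi(WF(t))$ satisfies the soft landing condition at $\Psi(p)$. By the H\"ormander pullback theorem and the functoriality of the cotangent lift recalled in the paragraph on lifted flows, $T^\star\Psi(WF(t))$ is the expression of $WF(t)$ in the chart $\kappa_2$, hence the soft landing estimate holds with respect to $(\kappa_2,\rho_2)$ on some neighborhood of $p$.

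The main step where care is needed is the translation between the two formulations of the soft landing condition, because Definition~\ref{Broudermetric} depends a priori on the chart while Definition~\ref{intrinsicslc} does not: one has to verify that the constant $\delta'$ produced by the change of variables is finite on a sufficiently small compact neighborhood of $p$, which follows from the smoothness of $\Psi$ together with uniform bounds on $D\Psi$ and $D\Psi^{-1}$ over such a neighborhood. Once this is settled, the rest is essentially formal. An equivalent presentation, avoiding any reference to the intrinsic version, would be to use Corollary~\ref{conjugcoro} to write $\rho_2=\Psi_\star\rho_1$ for some $\Psi\in G$ and then directly transport the estimate $|k|\leqslant\delta|h||\xi|$ through the Jacobian of $\Psi$, but the geometric reformulation is cleaner because it makes the invariance manifest.
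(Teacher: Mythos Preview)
Your proof is correct and follows essentially the same approach as the paper, which simply states ``By \ref{conjugcoro}, this implies'' the proposition: the core is exactly Proposition~\ref{HeleinmicroscopicI} (stability of the soft landing condition under $G$) combined with the fact that the passage between two adapted charts is realized by an element of $G$. Your main argument via the transition map $\kappa_2\circ\kappa_1^{-1}\in G$ is in fact slightly more direct than invoking Corollary~\ref{conjugcoro}, and you correctly identify the latter as an equivalent alternative route.
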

\begin{defi}
$WF(t)$ satisfies the soft landing condition if for all $p\in I$, it satisfies the soft landing condition locally at $p$.
\end{defi}

\paragraph{Consequences of the soft landing condition.}
\begin{lemm}
Let $t\in\mathcal{D}^\prime(M\setminus I)$.
If $WF(t)$ satisfies the soft landing condition, then 
$\overline{WF(t)}|_I\subset C$. In particular, this implies for all Euler $\rho$,
there exists a neighborhood $\mathcal{V}$ of $I$ such that $WF(t)\cap C^\rho=\emptyset$.
\end{lemm}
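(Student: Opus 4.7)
The plan is to establish the two conclusions in sequence: first the containment $\overline{WF(t)}|_I \subset C$ by a direct limiting argument in a local chart adapted to the soft landing hypothesis, then the second statement as an immediate application of the already proved Lemma \ref{lemmconorm}. Both claims are purely local near $I$, so I work in a chart around an arbitrary point of $I$.

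For the containment, I would pick any $q = (x_0, 0; k_0, \xi_0) \in \overline{WF(t)}|_I$ and a sequence $q_n = (x_n, h_n; k_n, \xi_n) \in WF(t)$ converging to $q$ in $T^\bullet M$. The soft landing condition at $p = (x_0,0)\in I$ provides, via Definition \ref{Broudermetric}, a chart $(x,h)$ with $I = \{h=0\}$ and $\rho = h^j\partial_{h^j}$, defined on some neighborhood $U$ of $p$, together with constants $\varepsilon, \delta > 0$ such that $|k| \leqslant \delta |h||\xi|$ on $WF(t)|_{U\cap\{|h|\leqslant\varepsilon\}}$. Convergence $q_n \to q$ in this chart guarantees that for $n$ large enough $(x_n,h_n) \in U$ with $|h_n|\leqslant \varepsilon$, whence $|k_n| \leqslant \delta |h_n||\xi_n|$. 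Passing to the limit, the right-hand side tends to $0$ (since $h_n \to 0$ and $\xi_n \to \xi_0$ is bounded), so $k_0 = 0$. Since $q\in T^\bullet M$ forces $(k_0,\xi_0)\neq 0$, necessarily $\xi_0 \neq 0$, so $q = (x_0,0; 0,\xi_0) \in C$, which is what we wanted.

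The second assertion follows immediately from Lemma \ref{lemmconorm} applied to the freshly established inclusion $\overline{WF(t)}|_I \subset C$, which yields, for any Euler $\rho$, a neighborhood $\mathcal{V}$ of $I$ with $WF(t)|_\mathcal{V}\cap C_\rho = \emptyset$. There is no genuine obstacle in this proof; the mildly subtle point is that the soft landing condition is formulated in one particular chart adapted to one particular $\rho$ at each point of $I$, but this is enough for the limiting argument since the chart is used only in a neighborhood of the base point of $q$, and any $\rho$-dependence in the second conclusion is absorbed into Lemma \ref{lemmconorm} itself. Should one prefer a manifestly intrinsic argument, the geometric reformulation of Definition \ref{intrinsicslc} combined with Proposition \ref{HeleinmicroscopicI} makes the chart choice irrelevant.
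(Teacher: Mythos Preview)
Your proof is correct and follows essentially the same approach as the paper: work locally in a chart where the soft landing inequality $|k|\leqslant\delta|h||\xi|$ holds, let $h\to 0$ to force $k=0$ on $\overline{WF(t)}|_I$, and then invoke Lemma~\ref{lemmconorm} for the second assertion. The paper states the limiting step more tersely (closure of $\{|k|\leqslant\delta|h||\xi|\}$ meets $\{h=0\}$ only in $\{k=0\}$), while you spell it out via a converging sequence, but the content is identical.
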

\begin{proof} 
By definition of the soft landing condition, 
it suffices to work locally at each 
$p\in I$. For each $p$, there exists
some open set $U$ s.t.
$\exists\delta>0,WF(t)|_{U\cap\{\vert h\vert\leqslant \varepsilon\}}
\subset\{\vert k\vert\leqslant \delta\vert h\vert\vert\xi\vert \}$ 
which
implies 
$\overline{WF(t)}|_{U\cap\{h=0\}}\subset \{k=0\}\implies \overline{WF(t)}|_{I\cap U}\subset C$. 
Actually $\overline{WF(t)}|_{I\cap U}\subset C\implies WF(t)|_{U\cap\{\vert h\vert\leqslant \varepsilon\}}\cap C_\rho=\emptyset$
for $\varepsilon$ small enough
by Lemma \ref{lemmconorm}. 
For each $p$, we were able 
to find an open set $U_p$ 
and $\varepsilon_p>0$
such that 
$WF(t)|_{U_p\cap\{\vert h\vert\leqslant \varepsilon_p\}}\cap C_\rho=\emptyset$
then $\cup_{p\in I}U_p\cap\{\vert h\vert\leqslant \varepsilon_p\}$
forms an open cover of $I$ 
and extracting a subcover $\mathcal{V}=\cup_{n\in\mathbb{N}}U_{p_n}\cap\{\vert h\vert\leqslant \varepsilon_{p_n}\}$ 
gives a neighborhood $\mathcal{V}$ of $I$ such that $WF(t)\cap C^\rho=\emptyset$.
\end{proof}
\begin{thm}\label{gammaconorm}
Let $t\in\mathcal{D}^\prime(M\setminus I)$.
$WF(t)$ satisfies the soft landing condition
if and only if 
\begin{eqnarray} 
\overline{\Gamma_M(WF(t))}|_I\subset C=(TI)^\perp,
\end{eqnarray}
where $\Gamma_M(WF(t))$ is defined by Equation (\ref{geomcharac}).
\end{thm}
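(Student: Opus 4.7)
The plan is to localize the problem using Theorem~\ref{locthmGOOD} and work in a chart $(x,h): U \to \mathbb{R}^{n+d}$ around $p \in I$ where $I = \{h=0\}$ and $\rho = h^j\partial_{h^j}$, so that the lifted flow acts explicitly as $T^\star e^{\log\lambda\rho}:(x,h;k,\xi)\mapsto(x,\lambda h;k,\lambda^{-1}\xi)$. By definition \eqref{geomcharac},
$$\Gamma_M(WF(t))|_U = \{(x,\lambda h;k,\lambda^{-1}\xi) : (x,h;k,\xi)\in WF(t),\ \lambda\in(0,T)\}\cap T^\bullet U.$$
Since both the soft landing condition and the conclusion $\overline{\Gamma_M}|_I\subset C$ are local at points of $I$, it suffices to prove equivalence in one such chart.

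For the forward implication, assume $WF(t)$ satisfies the soft landing condition, so in the chart $|k|\leqslant \delta|h||\xi|$ on $WF(t)|_{U\cap\{|h|\leqslant\varepsilon\}}$. Let $q_n = (x_n,\lambda_n h_n;k_n,\lambda_n^{-1}\xi_n)\in\Gamma_M(WF(t))$ with $(x_n,h_n;k_n,\xi_n)\in WF(t)$ and $\lambda_n h_n \to 0$. Normalize on the cosphere bundle: the unit direction has $k$-component of norm $|k_n|/\sqrt{|k_n|^2 + \lambda_n^{-2}|\xi_n|^2}$. The soft landing estimate gives $|k_n|/|\xi_n|\leqslant \delta|h_n|$, hence
$$\frac{\lambda_n|k_n|}{|\xi_n|} \leqslant \delta|\lambda_n h_n| \longrightarrow 0,$$
so the ratio $|k_n|/(\lambda_n^{-1}|\xi_n|)\to 0$, i.e.\ the normalized $k$-component tends to $0$. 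Any limit point therefore lies in $C$.

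For the converse, I argue by contrapositive. If the soft landing condition fails at $p$, then for each $n\in\mathbb{N}^\star$ one can find $(x_n,h_n;k_n,\xi_n)\in WF(t)$ with $(x_n,h_n)$ arbitrarily close to $p$, $|h_n|\leqslant 1/n$, and $|k_n|>n|h_n||\xi_n|$. Define $\lambda_n = |\xi_n|/|k_n|$; this is a positive parameter so the flowed point
$$q_n := (x_n,\lambda_n h_n;k_n,\lambda_n^{-1}\xi_n)$$
lies in $\Gamma_M(WF(t))$. Then $|\lambda_n h_n| = |\xi_n||h_n|/|k_n| < 1/n \to 0$, so modulo extracting a subsequence the base points converge to some $(x_\infty,0)\in I$. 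The construction forces $|k_n|=\lambda_n^{-1}|\xi_n|$, so the normalization of the cotangent part equals $(k_n/|k_n|,\xi_n/|\xi_n|)/\sqrt{2}$; extracting once more so that both unit vectors converge, the limit lies in $\overline{\Gamma_M(WF(t))}|_I$ but its $k$-component has norm $1/\sqrt{2}\neq 0$, so it is \emph{not} in $C$, contradicting the hypothesis.

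The main technical subtlety is the choice of the scaling parameter $\lambda_n$ in the converse direction: it must simultaneously push $(x_n,h_n)$ down to $I$ (requiring $\lambda_n h_n\to 0$) while keeping the normalized cotangent vector bounded away from the conormal (requiring $\lambda_n^{-1}|\xi_n|$ to not dwarf $|k_n|$). The failure of the soft landing inequality $|k_n|>n|h_n||\xi_n|$ is exactly what makes the choice $\lambda_n = |\xi_n|/|k_n|$ compatible with both requirements, and this is where the quantitative form of Definition~\ref{Broudermetric} (as opposed to the weaker $\overline{WF(t)}|_I\subset C$) is essential.
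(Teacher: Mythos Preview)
Your argument is essentially correct, with one small omission and a genuinely different route for the converse.

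\textbf{The gap.} In the contrapositive, your choice $\lambda_n = |\xi_n|/|k_n|$ tacitly assumes $\xi_n \neq 0$. The negation of the soft landing estimate $|k_n| > n|h_n||\xi_n|$ does not exclude $\xi_n = 0$; it then merely says $k_n \neq 0$. This degenerate case is trivial but must be mentioned: if $\xi_n = 0$ along a subsequence, the points $(x_n,h_n;k_n,0)\in WF(t)\subset\Gamma_M$ already have $h_n\to 0$ and normalised $k$-part of norm $1$, so any subsequential limit lies in $\overline{\Gamma_M}|_I\setminus C$ without any flowing. Split off this case first, then run your $\lambda_n$ construction on the subsequence with $\xi_n\neq 0$.

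\textbf{Comparison with the paper.} Your forward direction is the same as the paper's, just phrased sequentially: the paper observes directly that the region $\{|k|\leqslant\delta|h||\xi|\}$ is invariant under the lifted flow, hence contains all of $\Gamma_M$, hence its closure over $I$ sits in $\{k=0\}=C$. Your converse, however, is different and more hands-on. The paper argues via Proposition~\ref{majorantlimits} (a ``method of characteristics'' step): from $\overline{\Gamma_M}|_I\subset C$ and a compactness argument on the slice $|h|=\varepsilon$ it extracts a uniform estimate $\Gamma_M|_{|h|=\varepsilon}\subset\{|k|\leqslant\delta\varepsilon|\xi|\}$, and then propagates this inward by scale-invariance to recover the soft landing inequality on the whole punctured neighbourhood. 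Your contrapositive bypasses that proposition entirely by manufacturing the bad limit point directly, tuning $\lambda_n$ so that $|k_n|$ and $\lambda_n^{-1}|\xi_n|$ balance. Your route is shorter and self-contained; the paper's route has the advantage of making the geometric picture (the boundary slice at $|h|=\varepsilon$ determines $\Gamma_M$ entirely) explicit, which is reused elsewhere in the chapter.
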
 
\begin{proof} 
It suffices to work locally at each 
$p\in I$.
The sense $\Rightarrow$ is simple. The set $\{\vert k\vert\leqslant \delta\vert h\vert\vert\xi\vert \}$ is clearly invariant by the flow $(x,h;k,\xi)\rightarrow (x,\lambda h;k,\lambda^{-1}\xi)$. If $p\in WF(t)$ then by hypothesis $p\in \{\vert k\vert\leqslant \delta\vert h\vert\vert\xi\vert \}$, 
hence the whole curve $\lambda\mapsto \Phi_\lambda(p)$ lies in $\{\vert k\vert\leqslant \delta\vert h\vert\vert\xi\vert \}$ thus by definition $\Gamma_M=\{\Phi_\lambda(p)|p\in WF(t),\lambda\in(0,\infty),\Phi_\lambda(p)\in T^\bullet\left(0<\vert h\vert\leqslant\varepsilon\right)\}\subset \{\vert k\vert\leqslant \delta\vert h\vert\vert\xi\vert \}$.
%$$\overline{\Gamma_M|_{0<\vert h\vert\leqslant\varepsilon}}\cap C_\rho=\emptyset\implies 
%WF(t)|_{0<\vert h\vert\leqslant\varepsilon}}\cap C_\rho=\emptyset$$ because of the inclusion $WF(t)|_{0<\vert h\vert\leqslant\varepsilon}}\subset \overline{\Gamma_M|_{0<\vert h\vert\leqslant\varepsilon}}\implies \overline{WF(t)|_{0<\vert h\vert\leqslant\varepsilon}}}\subset \overline{\Gamma_M|_{0<\vert h\vert\leqslant\varepsilon}}$.
Since $\{\vert k\vert\leqslant \delta\vert h\vert\vert\xi\vert \}$ is closed then 
$\overline{\Gamma_M}\subset \{\vert k\vert\leqslant \delta\vert h\vert\vert\xi\vert \}$ and on $I=\{h=0\}$
we must have $k=0$ thus $\overline{\Gamma_M}|_I\subset C$.
Hence $\Gamma_{M}|_I\subset C$.
To establish the converse sense $\Leftarrow$, we use the proposition (\ref{majorantlimits}). 
If $\overline{\Gamma_M(WF(t))}|_I\subset C$ then
by Lemma \ref{lemmconorm}, 
$\Gamma_M(WF(t))|_{\vert h\vert=\varepsilon}\cap \{(x,h;k,0)|k\neq 0\}=\emptyset$ for $\varepsilon$ small
enough. 
This implies that $\exists \delta>0$
s.t. $\Gamma|_{\vert h\vert=\varepsilon} 
\subset \{\vert k\vert\leqslant \delta\varepsilon\vert\xi\vert\}$.
Indeed let us proceed by contradiction.
Assume the contrary,
then
for any
$n\in \mathbb{N}^*$, there
exist $(x_n,h_n;k_n\xi_n)\in \Gamma|_{\vert h\vert=\varepsilon}$ s.t.
$k_n>n\vert\xi_n\vert $ and w.l.g.
$\vert k_n\vert=1$.
By compactness, we can extract a
subsequence which converges
to $(x,h;k,0)$.
This hypothesis translates 
in an estimate $\Gamma_M|_{\vert h\vert=\varepsilon}\subset\{\vert k\vert\leqslant \delta\varepsilon\vert\xi\vert  \}$
for a certain
$\delta>0$. 
Now the idea is to scale 
this estimate in order to have a general estimate for all $h$.
$$ p\in \Gamma_M|_{\vert h\vert=\varepsilon}\implies p=(x,h;k,\xi)\in \{\vert k\vert\leqslant \delta\varepsilon\vert\xi\vert  \}\subset\{\vert k\vert\leqslant \delta\vert h\vert\vert\xi\vert  \}$$ 
by the estimate $\Gamma_M|_{\vert h\vert=\varepsilon}\subset\{\vert k\vert\leqslant \delta\varepsilon\vert\xi\vert  \}$ and because $\vert h\vert=\varepsilon$,
$$\implies \forall \lambda\in(0,1], \Phi_\lambda(p)=(x,\lambda h;k,\lambda^{-1}\xi)\in  \{\vert k\vert\leqslant \delta\lambda\vert h\vert\lambda^{-1}\vert\xi\vert  \}=\{\vert k\vert\leqslant \delta\vert h\vert\vert\xi\vert \}$$ 
Hence by proposition (\ref{majorantlimits}) we find
\begin{eqnarray}
\Gamma_M|_{0<\vert h\vert\leqslant\varepsilon}=\{\Phi_\lambda(p) \vert p\in\Gamma_M|_{\vert h\vert=\varepsilon}, 0<\lambda\leqslant 1 \} \subset \{\vert k\vert\leqslant \delta\vert h\vert\vert\xi\vert \}
\\ \overline{\Gamma_M|_{0<\vert h\vert\leqslant\varepsilon}}=\{\Phi_\lambda(p) \vert p\in\Gamma_M|_{\vert h\vert=\varepsilon},\lambda\in[0,1] \}\subset \{\vert k\vert\leqslant \delta\vert h\vert\vert\xi\vert \} 
\end{eqnarray}
and we proved the claim because $WF(t)|_{0<\vert h\vert\leqslant\varepsilon}\subset \overline{\Gamma_M|_{0<\vert h\vert\leqslant\varepsilon}}$.   
%   $p=(x,h;k,\xi), 0<\vert h\vert\leqslant\varepsilon, \xi\neq 0 $ because $p\notin C_\rho$, $\Phi_\lambda(p)=(x,\lambda h,k,\lambda^{-1}\xi)\rightarrow C$ and never meets $C_\rho$.
%  
%  So the curve $\{\Phi_\lambda(p) \vert \lambda\in[0,1]\}$ never meets $C_\rho$ and ends in $C$.
%
%
% Any limit point $p\in \overline{\Gamma_M|_{0<\vert h\vert\leqslant\varepsilon}}$ over $I=\{h=0\}$ can be obtained as a limit  
%$$p=\lim_{\lambda\rightarrow 0} \Phi_\lambda(q),q \in \overline{WF(t)|_{0<\vert h\vert\leqslant\varepsilon}}}$$ 
%by the same argument as in proposition (\ref{majorantlimits}). But $q$ does not meet $C_\rho$ by assumption hence $p=\lim_{\lambda\rightarrow 0} \Phi_\lambda(q)\in C$
%because $C$ is the stable manifold.
%  
\end{proof} 
%  Let $U$ be any open set of $\mathbb{R}^{n+d}$ which is stable by the flow $\forall\lambda\in (0,1], \Phi_\lambda\left(U\right)\subset U$.
%
% We call $\pi_1:T^\star\mathbb{R}^{n+d}\mapsto [0]\simeq\mathbb{R}^{n+d}$ the projector on the zero section $[0]\subset T^\star\mathbb{R}^{n+d}$ of the bundle $T^\star\mathbb{R}^{n+d}$ viewed as a submanifold and identified with the base space $\mathbb{R}^{n+d}$. 
\subsubsection{A counterexample which shows the optimality of the soft landing condition.}
We give a counterexample 
which proves
$\overline{WF(t)}|_I\subset C $ does not imply
$\overline{\Gamma_M(WF(t))}|_I\subset C$
ie 
the soft landing condition 
(\ref{Broudermetric}) 
is in fact optimal.
We work in $\mathbb{R}^2$ with coordinates $(x,h)$.
The Euler vector field writes $\rho=h\partial_{h}$. 
If $ WF(t)=\{(x,h;\lambda 1,\lambda h^{-\frac{1}{2}})|\lambda\in\mathbb{R}_+ \}$ then it is immediate that $\overline{WF(t)}|_I\subset C=\{(x,0;0,\xi)\}$. 
However $WF(t)$ does not satisfy the 
soft landing condition since  
we find that the sequence of points $(x,\frac{1}{n^2};1,n)$ belongs to $WF(t)$.
By definition of $\Gamma=\bigcup_{\lambda\in(0,1]}WF(t_\lambda)$, we find that
$$\Gamma=\{(x,\lambda^{-1}h,k,\lambda\xi)| (x,h,k,\xi)\in WF(t),\lambda\in(0,1]\} $$
thus setting $\lambda_n=\frac{1}{n}$, we find that the sequence $(x,n\frac{1}{n^2};1,\frac{n}{n})=(x,\frac{1}{n};1,1)$ belongs to $\Gamma$ thus $\lim_{n\rightarrow \infty}(x,\frac{1}{n};1,1)=(x,0;1,1)\in\overline{\Gamma}|_I$ which does not live in the conormal. 
\section{The counterterms are conormal distributions.}
We fix the coordinate system $(x^i,h^j)$ in $\mathbb{R}^{n+d}$ and $I=\{h=0\}$.
We first recall a deep theorem of Schwartz (see \cite{Schwartz} Theorems 36 p.~101) 
about the structure of distributions supported on  $I\subset \mathbb{R}^{n+d}$.
We denote by $\delta_I$ the unique distribution such that $\forall\varphi\in \mathcal{D}(\mathbb{R}^{n+d})$,$$\left\langle \delta_I,\varphi\right\rangle=\int_{\mathbb{R}^n} \varphi(x,0)d^nx .$$
The collection of coordinate functions $(h^j)_{1\leqslant j\leqslant d}$ defines a canonical collection of transverse vector fields $(\partial_{h^j})_j$.
If $t\in \mathcal{D}^\prime(\mathbb{R}^{n+d})$ with $\text{supp } t \subset I$, then there exist
a unique family of distributions (once the system of transverse vector fields $\partial_{h^j}$ is fixed) $t_\alpha\in \mathcal{D}^\prime\left(\mathbb{R}^n\right)$, with $\{\text{supp }t_\alpha\}$ locally finite, such that $t(x,h)=\sum_\alpha t_\alpha(x) \partial_h^\alpha\delta_I(h)$
(see \cite{Schwartz} Theorem 36 p.~101-102 or \cite{Hormander} theorem 2.3.5)) where the $\partial_h^\alpha$ are derivatives in the \textbf{transverse} directions. 
\paragraph{What happens in the case of manifolds ?}
From the point of view of 
L. Schwartz,
the only thing to keep in mind
is that a distribution
supported on a submanifold $I$ 
is always
well defined locally
and the representation 
of this distribution is 
unique once we fix
a system
of coordinate functions
$(h^j)_j$
which are transverse to $I$
(\cite{Schwartz} Theorem 37 p.~102).
For any distribution $t_\alpha\in\mathcal{D}^\prime(I)$, if we denote by 
$i:I\hookrightarrow M$ the canonical embedding of $I$ in $M$ then 
$i_\star t_\alpha$ is the push-forward of $t_\alpha$ in $M$:
$$\forall\varphi\in\mathcal{D}(M), \left\langle i_\star t_\alpha, \varphi \right\rangle=\left\langle t_\alpha, \varphi\circ i \right\rangle.$$ 
The next lemma completes Theorem \ref{removsing} proved in Chapter $1$. 
Here the idea is that we add a constraint on the \textbf{local counterterm }$t$, namely that $WF(t)$ is contained in the conormal of $I$. Then we prove that the coefficients $t_\alpha$ appearing in the Schwartz representation are in fact \textbf{smooth} functions.
\begin{lemm}\label{lemmbound}
Let $t\in\mathcal{D}^\prime(M)$ such that
$t$ is supported on $I$, then\\
1) $t$ has a unique 
decomposition as locally finite
linear combinations of transversal derivatives
of push-forward to $M$ of distributions $t_\alpha$ in $\mathcal{D}^\prime (I)$:
$t=\sum_{\alpha} \partial^\alpha_h\left(i_\star t_\alpha\right)$,\\
and 2) $WF(t)$ is contained in the conormal of $I$ if and only if
$\forall \alpha$, $t_\alpha$ is smooth.
\end{lemm}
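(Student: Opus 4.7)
The plan is to reduce everything to Schwartz's structure theorem (Theorem 36 in \cite{Schwartz}, quoted in the preamble) and then, for the hard direction, to extract each coefficient $t_\beta$ from $t$ by a specifically tailored test function.

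For part 1, I would cover a tubular neighborhood of $I$ by adapted charts $(x,h)\colon U\to\mathbb{R}^{n+d}$ in which $I=\{h=0\}$, apply Schwartz's theorem to get locally a unique representation $t|_U=\sum_\alpha t_\alpha^U(x)\,\partial^\alpha_h\delta(h)$ with $\{\mathrm{supp}\,t_\alpha^U\}$ locally finite, and glue these pieces by a partition of unity on $I$ once a global system of transverse vector fields $(\partial_{h^j})_j$ is fixed. Uniqueness of the global $t_\alpha\in\mathcal{D}^\prime(I)$ then follows from Schwartz's local uniqueness together with the compatibility of local representations under the fixed transverse frame. For part~2, the direction $(\Leftarrow)$ is immediate: when every $t_\alpha$ is smooth, each summand $\partial^\alpha_h(i_\star t_\alpha)$ is locally of the form $t_\alpha(x)\partial^\alpha_h\delta(h)$ and has wave front set inside $C=\{(x,0;0,\xi):\xi\neq 0\}$; local finiteness of the sum preserves this inclusion.

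For the harder direction $(\Rightarrow)$, I would work locally around an arbitrary point $x_0\in I$ in a chart where $t=\sum_{|\alpha|\leq N}t_\alpha(x)\partial^\alpha_h\delta(h)$ is a finite sum (achievable after restricting to a compact neighborhood, by local finiteness). By Borel's theorem, for each multiindex $\beta$ with $|\beta|\leq N$ I can construct $\chi_\beta\in\mathcal{D}(\mathbb{R}^d)$, supported in an arbitrarily small neighborhood of $0$, with $\partial^\alpha_h\chi_\beta(0)=(-1)^{|\beta|}\delta_{\alpha\beta}$ for all $|\alpha|\leq N$. For any $\psi\in\mathcal{D}(\mathbb{R}^n)$ supported near $x_0$, set $\Phi(x,h)=\psi(x)\chi_\beta(h)\in\mathcal{D}(\mathbb{R}^{n+d})$. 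A direct computation using Leibniz gives
$$\widehat{\Phi t}(k,\xi)=\sum_{\beta\leq\alpha,\,|\alpha|\leq N}(-1)^{|\alpha|+|\beta|}\binom{\alpha}{\beta}(-i\xi)^{\alpha-\beta}\,\widehat{\psi t_\alpha}(k),$$
a polynomial in $\xi$ whose value at $\xi=0$ is exactly $\widehat{\Phi t}(k,0)=\widehat{\psi t_\beta}(k)$. Now the assumption $WF(t)\subset C$ means that for every $k_0\in\mathbb{R}^n\setminus 0$ the covector $(x_0,0;k_0,0)$ is regular for $t$; by definition of the wave front set, after possibly shrinking $\psi$ around $x_0$ and $\chi_\beta$ around $0$, the function $\widehat{\Phi t}(k,\xi)$ is rapidly decreasing in some open conic neighborhood of $(k_0,0)$ in $\mathbb{R}^{n+d}$. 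Such a conic neighborhood automatically contains all rays $\{(tk,0):t>0,\ k\in V\}$ for some conic neighborhood $V$ of $k_0$ in $\mathbb{R}^n\setminus 0$, so restricting to $\xi=0$ shows that $\widehat{\psi t_\beta}(k)$ is rapidly decreasing on $V$. Hence $(x_0,k_0)\notin WF(t_\beta)$, and since $x_0\in I$ and $k_0\neq 0$ were arbitrary, $WF(t_\beta)=\emptyset$, i.e. $t_\beta$ is smooth.

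The main obstacle is not conceptual but in the bookkeeping: one must simultaneously localize the cutoff near $(x_0,0)$ (so that the microlocal hypothesis applies) and arrange its transverse Taylor expansion so that a single scalar inner product $\widehat{\Phi t}(k,0)$ recovers exactly $\widehat{\psi t_\beta}(k)$ without parasitic contributions from the other $t_\alpha$. The Borel construction of $\chi_\beta$ plus the Leibniz expansion above is what makes this possible; once the identity $\widehat{\Phi t}(k,0)=\widehat{\psi t_\beta}(k)$ is in hand, the conversion of the global microlocal hypothesis $WF(t)\subset C$ into the pointwise smoothness of each $t_\beta$ on $I$ is a direct application of the definition of $WF$.
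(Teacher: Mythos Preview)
Your argument is correct and in fact somewhat cleaner than the paper's. For part~1 and for the easy direction of part~2 you do essentially what the paper does (Schwartz structure theorem plus H\"ormander~8.1.5). For the hard direction $(\Rightarrow)$, however, the two proofs are genuinely different.

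The paper argues by contradiction: it assumes some $t_\alpha$ is not smooth, picks $(x_0;k_0)\in WF(t_\alpha)$, computes $\mathcal{F}(t_\alpha\chi\,\partial_h^\alpha\delta_I)(k,\xi)=\widehat{t_\alpha\chi}(k)(-i\xi)^\alpha$, and infers a nonconormal wave-front element of $t$ via H\"ormander's Lemma~8.2.1. This is quick, but note that it only analyses a \emph{single} summand of $t$; the passage from ``$\widehat{t_\alpha\chi}(k)(-i\xi)^\alpha$ is not rapidly decreasing'' to ``$(x,0;k_0,\xi_0)\in WF(t)$'' tacitly requires that the remaining terms of the finite sum do not cancel this growth, a point the paper does not spell out (it can be fixed, e.g.\ by taking $\alpha$ of maximal order among the nonsmooth coefficients and choosing $\xi_0$ generically).

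Your route avoids this issue entirely. By prescribing the transverse Taylor data of the cutoff via Borel so that $\partial_h^\alpha\chi_\beta(0)=(-1)^{|\beta|}\delta_{\alpha\beta}$, you arrange the identity $\widehat{\Phi t}(k,0)=\widehat{\psi t_\beta}(k)$, which \emph{kills} all the other coefficients at once and reduces the smoothness of $t_\beta$ to the rapid decay of $\widehat{\Phi t}$ along $\{\xi=0\}$---an immediate consequence of $WF(t)\subset C$. The price you pay is the construction of $\chi_\beta$, but that is elementary. A small notational quibble: in your displayed Leibniz formula you use $\beta$ both as the fixed target multiindex and as the inner summation index; renaming the latter (say to $\gamma$) would make the computation unambiguous.
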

\begin{proof}
In local coordinates,
let $$t(x,h)=\sum_\alpha \partial_h^\alpha\left(t_\alpha(x) \delta_I(h)\right)=\sum_\alpha t_\alpha(x) \partial_h^\alpha\delta_I(h).$$
Assume $t_\alpha$ is not smooth 
then $WF(t_\alpha)$ would be \textbf{non empty}. 
Then $WF(t_\alpha)$ 
contains an element $(x_0;k_0)$. 
Pick $\chi\in \mathcal{D}(R^n)$ such that $\chi(x_0)\neq 0$ 
then
$$\mathcal{F}(t_\alpha\chi \partial_h^\alpha\delta_I)(k,\xi)=\widehat{t_\alpha\chi}(k)(-i\xi)^\alpha, $$
hence we find a codirection $(\lambda k_0,\lambda \xi),k_0\neq 0$ in which the product  $\widehat{t_\alpha\chi}\widehat{\partial_h^\alpha\delta_I}$ is not rapidly decreasing, hence there is a point $(x,0)$ such that $(x,0;k_0,\xi_0)\in WF(t)$ (by lemma $8.2.1$ in \cite{Hormander}) which is in contradiction with the fact that $WF(t)\subset C=\{(x,0,0,\xi)|\xi\neq 0\}$.
The reader can use Theorem 8.1.5 in \cite{Hormander} for the converse.
\end{proof}
Combining with Theorem \ref{removsing}, 
we obtain:
\begin{coro}
Let $t\in\mathcal{D}^\prime(\mathbb{R}^{n+d})$ and $\text{supp }t\subset I$.
If $WF(t)\subset C$ and $t\in E_s(\mathbb{R}^{n+d}), -m-1<s+d\leqslant -m$, then $t(x,h)=\sum_\alpha t_\alpha(x) \partial_h^\alpha\delta_I(h)$, where $\forall\alpha$, $t_\alpha\in C^\infty\left(\mathbb{R}^n\right)$ and $\vert\alpha\vert\leqslant m$.
\end{coro}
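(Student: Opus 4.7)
The plan is to combine the three ingredients already at hand: the Schwartz structure theorem for distributions supported on a submanifold, Lemma \ref{lemmbound}, and the scaling computation from the proof of Theorem \ref{removsing}. Since $t$ is supported on $I$, Schwartz's theorem gives a unique locally finite representation
\begin{equation}
t(x,h)=\sum_\alpha t_\alpha(x)\,\partial_h^\alpha\delta_I(h),\qquad t_\alpha\in\mathcal{D}^\prime(\mathbb{R}^n).
\end{equation}
The wave front hypothesis $WF(t)\subset C$ combined with Lemma \ref{lemmbound} immediately forces $t_\alpha\in C^\infty(\mathbb{R}^n)$ for every $\alpha$, so the first half of the conclusion is free. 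It remains to prove that $t_\alpha\equiv 0$ whenever $|\alpha|>m$.

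For this I would exploit the weak homogeneity. A direct computation, identical to the one appearing in the proof of Theorem \ref{removsing}, yields
\begin{equation}
\lambda^{-s}\bigl(t_\alpha(x)\partial_h^\alpha\delta_I\bigr)_\lambda
=\lambda^{-s-d-|\alpha|}\,t_\alpha(x)\partial_h^\alpha\delta_I,
\end{equation}
so that
\begin{equation}
\lambda^{-s}t_\lambda=\sum_\alpha \lambda^{-s-d-|\alpha|}\,t_\alpha(x)\partial_h^\alpha\delta_I.
\end{equation}
The hypothesis $t\in E_s(\mathbb{R}^{n+d})$ says that this expression is bounded in $\mathcal{D}^\prime(\mathbb{R}^{n+d})$ as $\lambda\in(0,1]$.

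The key step is to separate the contributions of different transverse orders. I would test against functions of the form $\varphi(x)\frac{h^\beta}{\beta!}\psi(h)$, where $\psi$ is a plateau function equal to $1$ near $h=0$: such a test function annihilates $\partial_h^\alpha\delta_I$ for $\alpha\neq\beta$ and recovers $(-1)^{|\beta|}\langle t_\beta,\varphi\rangle$ for $\alpha=\beta$. The uniqueness of the Schwartz decomposition then forces the scalar function $\lambda\mapsto\lambda^{-s-d-|\alpha|}\langle t_\alpha,\varphi\rangle$ to be bounded on $(0,1]$ for each $\alpha$ and each test function $\varphi$. Whenever $|\alpha|>-s-d$ the exponent $-s-d-|\alpha|$ is strictly negative, so boundedness as $\lambda\to 0$ forces $\langle t_\alpha,\varphi\rangle=0$ for every $\varphi$, hence $t_\alpha=0$. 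Since $-m-1<s+d\leqslant -m$ translates into $m\leqslant -s-d<m+1$, and $|\alpha|$ is a nonnegative integer, this is equivalent to $|\alpha|\leqslant m$.

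The only delicate point is this separation step: one must check that no cancellation between different $\alpha$'s can conspire to keep an unbounded coefficient compatible with global boundedness. The test-function argument above handles it by isolating one $\alpha$ at a time, using the orthogonality relation $\langle \partial_h^\alpha\delta_I,\,h^\beta\psi(h)\varphi(x)/\beta!\rangle=(-1)^{|\alpha|}\delta_{\alpha\beta}\langle t_\alpha,\varphi\rangle$ valid on the support of $\psi$. Everything else is bookkeeping already performed in Theorem \ref{removsing}.
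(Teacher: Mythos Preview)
Your proof is correct and follows exactly the route the paper takes: the corollary is stated there as an immediate combination of Lemma~\ref{lemmbound} (smoothness of the $t_\alpha$ from the conormal hypothesis) and Theorem~\ref{removsing} (the bound $|\alpha|\le m$ from the scaling). Your explicit test-function separation argument, isolating each $t_\beta$ via $\varphi(x)\tfrac{h^\beta}{\beta!}\psi(h)$, is a welcome elaboration of a step that the proof of Theorem~\ref{removsing} itself leaves implicit.
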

\begin{coro}
Let $M$ be a smooth manifold
and $I$ a closed embedded submanifold.
For $-m-1< s+d\leqslant-m $, the space of distributions $t\in E_s(M)$ 
such that $\text{supp }t\in I$ and $WF(t)$ is contained in the conormal of $I$ is a finitely generated module of \textbf{rank} $\frac{m+d!}{m!d!}$ over the ring $C^\infty(I)$. 
\end{coro}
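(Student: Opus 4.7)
The plan is to reduce the global statement to the local one provided by the preceding corollary, via a partition-of-unity argument, and then to count multi-indices.

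First I would fix a locally finite atlas $(U_i,\varphi_i)_{i\in\mathbb{N}}$ of a tubular neighborhood of $I$ by adapted charts, i.e.\ $\varphi_i:U_i\to\Omega_i\subset\mathbb{R}^{n+d}$ with $\varphi_i(U_i\cap I)=\Omega_i\cap\{h=0\}$, together with a subordinate partition of unity $(\chi_i)$. For any $t$ in the space under consideration, I would split $t=\sum_i\chi_i t$ and observe that each $\chi_i t$ is supported in $U_i\cap I$, has wave front set contained in the conormal of $U_i\cap I$, and still lies in $E_s$ (all three properties being preserved by multiplication by a smooth compactly supported function, using $WF(\chi_i t)\subset WF(t)$ and the obvious scaling bound). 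Pushing forward via $\varphi_i$ and applying the preceding corollary in the chart, I obtain
\begin{equation}
\chi_i t \;=\; \sum_{|\alpha|\leqslant m} t_{i,\alpha}(x)\,\partial_h^\alpha\delta_I,\qquad t_{i,\alpha}\in C^\infty(\varphi_i(U_i\cap I)).
\end{equation}

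The counting step is now immediate: the number of multi-indices $\alpha\in\mathbb{N}^d$ with $|\alpha|\leqslant m$ equals $\binom{m+d}{d}=\frac{(m+d)!}{m!\,d!}$, and the family $\{\partial_h^\alpha\delta_I\}_{|\alpha|\leqslant m}$ is linearly independent over $C^\infty(I\cap U_i)$ by the uniqueness clause in the Schwartz structure theorem (\cite{Schwartz} Theorem 36 p.~101). Hence on every chart the module is free of the claimed rank, and gluing by the partition of unity $(\chi_i)$ shows it is finitely generated of that rank as a $C^\infty(I)$-module.

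The main obstacle, and the only non-routine part, is that the generators $\partial_h^\alpha\delta_I$ are not canonical: they depend on the choice of transverse frame $(\partial_{h^j})_j$ provided by the chart. To obtain a coordinate-invariant version one has to track how the coefficients $t_{i,\alpha}$ transform on overlaps $U_i\cap U_j$: applying the chain rule to express the transverse derivatives in the chart $j$ in terms of those of chart $i$ produces a smooth $\binom{m+d}{d}\times\binom{m+d}{d}$ transition matrix, identifying the module with the $C^\infty(I)$-module of smooth sections of a rank $\binom{m+d}{d}$ vector bundle over $I$ — intrinsically, the bundle of jets of order $\leqslant m$ along the normal bundle $NI$ (twisted by the normal density line bundle so that $\delta_I$ is globally defined). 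This confirms local freeness of constant rank $\binom{m+d}{d}$ and completes the proof.
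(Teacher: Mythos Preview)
Your proof is correct and follows essentially the same route as the paper: reduce to local charts where the preceding corollary gives $t=\sum_{|\alpha|\leqslant m} t_\alpha(x)\partial_h^\alpha\delta_I$ with $t_\alpha\in C^\infty(I)$, then count multi-indices. Your treatment is in fact more complete than the paper's one-sentence sketch, since you spell out the partition-of-unity localization and the transition behaviour on overlaps (identifying the module with sections of a rank-$\binom{m+d}{d}$ bundle), which the paper leaves implicit.
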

\begin{proof}
In each local chart $(x,h)$ where $I=\{h=0\}$,
$t=\sum_\alpha t_\alpha(x) \partial_h^\alpha\delta_I(h)$ where the lenght $\vert\alpha\vert$ is bounded by $m$ by the above corollary and $\forall\alpha$, $t_\alpha\in C^\infty\left(I\right)$. This improves on the result given 
by the structure theorem of Laurent Schwartz 
since we now know that the $t_\alpha$ are smooth.
\end{proof}
Recall $\pi$ is the fibration which
in local coordinates where $\rho=h^j\frac{\partial}{\partial h^j}$ 
writes $\pi:(x,h)\mapsto x$ and $i$ is the embedding
of $I$ in $M$.
Recall the formula \ref{equationscounterterms} for the counterterms  
which are used to renormalize the H\"ormander extension formula:
\begin{equation}\label{counterterms}
\left\langle\tau_\lambda,\varphi\right\rangle= \left\langle t \psi(\frac{h}{\lambda}),\sum_{\vert\alpha\vert\leqslant m}\frac{h^\alpha}{\alpha !}\pi^{\rho\star}i^\star\left(\partial^\alpha_h\varphi\right) \right\rangle.
\end{equation}
We give here a general definition of local counterterms of $t$ 
that covers the counterterms of Chapter $1$, the anomaly counterterms
of Chapter $6$ and 
the poles
of the meromorphic regularization
of Chapter $7$:
\begin{defi}
Let us fix a system $(h^j)_{1\leqslant j\leqslant d}$ 
of coordinate functions transverse to $I$.
The vector space of local counterterms
of $t\in\mathcal{D}^\prime(M\setminus I)$ is defined as the vector space 
generated by all distribution $\tau$
supported on $I$ 
which can be represented by the formula:
\begin{equation}
\forall \varphi\in\mathcal{D}(M),\left\langle\tau,\varphi\right\rangle=\left\langle t \psi,\pi^{\rho\star}i^\star\left(\partial^\alpha_h\varphi\right) \right\rangle,
\end{equation}
where $\psi$ vanishes in a neighborhood of $I$ and $\pi:\text{supp }\psi \mapsto I$
is a proper mapping.
\end{defi}
The next theorem
we will
prove is very simple
yet extremely
important
conceptually 
for QFT in curved
space times.
In classical
QFT textbooks,
one should subtract
polynomials
of momenta
to renormalize divergent
integrals.
By inverse Fourier
transform
these counterterms 
become sums
of derivatives
of delta functions
supported
on vector subspaces
of configuration
space.
In curved space
times, there is no concept
of polynomials of momenta
but the notion
of conormal
distribution
supported on a submanifold
still makes sense
and replaces the concept 
of
polynomials
of momenta.
We start by a simple lemma:
\begin{lemm}\label{lemm3}
Let $t\in\mathcal{D}^\prime(M\setminus I)$ and $\tau$ be a distribution defined by the formula
\begin{equation}
\forall \varphi\in\mathcal{D}(M),\left\langle\tau,\varphi\right\rangle=\left\langle t \psi,\left(\partial^\alpha_h\varphi\right)\circ i\circ \pi \right\rangle,
\end{equation}
where $\psi$ 
vanishes in a 
neighborhood of $I$ and 
$\pi:\text{supp }\psi \mapsto I$
is a proper mapping.
If $WF(t\psi)\cap C_\rho=\emptyset$ 
then $WF(\tau)$ is contained in the conormal $C$.
\end{lemm}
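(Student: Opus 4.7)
\emph{Plan.} The strategy is to recognize $\tau$ as a transverse derivative of a pushforward of $t\psi$, and then to apply the microlocal pushforward theorem for proper submersions. The hypothesis $WF(t\psi)\cap C_\rho = \emptyset$ is tailored precisely to make that pushforward smooth, because $C_\rho$ is the conormal bundle to the fibers of $\pi$.

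First I would show that $\tau$ is supported on $I$. If $\varphi\in\mathcal{D}(M)$ vanishes in a neighborhood of $I$, then every derivative $\partial^\alpha_h\varphi$ vanishes on $I$, so $(\partial^\alpha_h\varphi)\circ i\circ\pi\equiv 0$ on $\text{supp}\,\psi$ and hence $\langle\tau,\varphi\rangle=0$. Next, since $\pi:\text{supp}\,\psi\to I$ is proper, the pushforward $\pi_*(t\psi)\in\mathcal{D}^\prime(I)$ is well defined by $\langle\pi_*(t\psi),f\rangle = \langle t\psi,\pi^*f\rangle$ for $f\in\mathcal{D}(I)$. A direct rewriting of the defining formula of $\tau$, moving the transverse derivatives off $\varphi$ by integration by parts, yields the identity
\begin{equation}
\tau = (-1)^{|\alpha|}\,\partial^\alpha_h\bigl(i_*\pi_*(t\psi)\bigr).
\end{equation}
This is precisely the Schwartz decomposition of $\tau$ supplied by the structure theorem for distributions supported on $I$, with a single nonzero coefficient equal to $(-1)^{|\alpha|}\pi_*(t\psi)$. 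By Lemma \ref{lemmbound}, it is therefore enough to prove that $\pi_*(t\psi)$ is a smooth function on $I$.

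Finally, I would invoke H\"ormander's theorem on the pushforward under a proper submersion. In local coordinates $\pi(x,h)=x$ is a submersion whose fibers have conormal bundle exactly $C_\rho=\{(x,h;k,0)\telque k\neq 0\}$. The microlocal estimate for pushforward under a proper submersion reads
\begin{equation}
WF(\pi_*(t\psi))\subset\{(x;k)\in T^\bullet I\telque \exists h,\ (x,h;k,0)\in WF(t\psi)\},
\end{equation}
and the assumption $WF(t\psi)\cap C_\rho=\emptyset$ forces the right-hand side to be empty. Thus $\pi_*(t\psi)\in C^\infty(I)$, and Lemma \ref{lemmbound} gives $WF(\tau)\subset C$. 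The only real ingredient is the microlocal pushforward formula; the identification of $C_\rho$ with the fiber conormal of $\pi$ is what makes the hypothesis exactly the right one, and I do not anticipate any serious obstacle beyond this bookkeeping.
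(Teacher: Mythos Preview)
Your proof is correct and follows essentially the same route as the paper: identify $\tau$ (up to sign) with $\partial_h^\alpha$ applied to the pushforward $\pi_*(t\psi)$, use the microlocal pushforward estimate (the paper cites Duistermaat's FIO book, Proposition~1.3.4) together with $WF(t\psi)\cap C_\rho=\emptyset$ to conclude $\pi_*(t\psi)\in C^\infty(I)$, and finish by the structure lemma for conormal distributions. The only cosmetic differences are that the paper motivates the pushforward via a Fubini-type integral computation and concludes directly from H\"ormander's Theorem~8.1.5 rather than explicitly quoting Lemma~\ref{lemmbound}.
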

\begin{proof}
We can prove our claim in local charts and reduce to the flat case $\mathbb{R}^{n+d}$.
$\tau$ can be reformulated as a product of the pushforward of $t\psi$ by the \textbf{fibration} $\pi:(x,h)\in\mathbb{R}^{n+d}\mapsto x\in\mathbb{R}^n$ with a derivative of delta distribution.
The idea of the proof is to use the Fubini theorem where integration is performed in a specific order. To clearly understand the strategy, let us write $\left\langle t \psi,\partial^\alpha\varphi(x,0) \right\rangle$ in integral form
$$\int_{\mathbb{R}^{n+d}} d^nx d^dh t(x,h)\psi(x,h) \partial^\alpha\varphi(x,0) $$ 
$$=\int_{\mathbb{R}^{n}} d^nx \left(\int_{\mathbb{R}^d}d^dh t(x,h)\psi(x,h)\right) \partial^\alpha\varphi(x,0)$$
$$=\int_{\mathbb{R}^{n}} d^nx \underset{\text{integrated along fibers}}{\underbrace{\left(\int_{\pi^{-1}(x)}d^dh t(x,h)\psi(x,h)\right)}} \partial^\alpha\varphi(x,0).$$
This formula suggests the coefficient $t_\alpha(x)$ 
in the Schwartz representation formula
is just equal to the integral
$\left(\int_{\pi^{-1}(x)}d^dh t(x,h)\psi(x,h)\right)$.
%Recall in coordinates $C_\rho=\{(x,h;k,0)| k\neq 0\}$ and 
%$C_\rho$ is interpreted as the 
%coisotropic set 
%associated with the locally trivial fibration $\pi:(x,h)\mapsto x$ 
%which is the union of conormals of the leaves $\pi^{-1}(x)$. 
%$t\psi$ is compactly supported in the fibers of the fibration $\pi$. 
Then the distribution
$x\mapsto t_\alpha(x)=\int_{\pi^{-1}(x)}d^dh t(x,h)\psi(h)$ 
is the $\textbf{pushforward}$ $\pi_*\left(t\psi\right)$ 
where we \textbf{integrated $t\psi$ along the fibers of the fibration} $\pi$.
The wave front set of $\pi_*\left(t\psi\right)$
can be computed by proposition $(1.3.4)$ page $20$ of \cite{DuistermaatFIO}.
$WF(\pi_*\left(t\psi\right))=\{(x;k)|\exists h, (x,h;k,0)\in WF(t\psi)\}$, since $WF(t\psi)\cap C_\rho=\emptyset$ then  
$WF(\pi_*\left(t\psi\right))$ is empty 
hence $\pi_*\left(t\psi\right)\in C^\infty(I)$.
Finally, if we set $t_\alpha=\pi_*\left(t\psi\right)$ then the counterterm $\tau$
writes $\tau(x,h)=t_\alpha(x)\partial_h^\alpha\delta_I(h)$ where $t_\alpha\in C^\infty(I)$ and is a conormal distribution in the terminology
of H\"ormander (see \cite{Hormander} 8.1.5).
\end{proof}
Combining Lemmas \ref{lemm3}, \ref{lemmbound}, \ref{lemmconorm} 
and fixing a 
system
of coordinates 
functions $(h^j)_j$
transversal to $I$
yields the
theorem:
\begin{thm}\label{Grosthmconormcontretermes}
Let $t\in\mathcal{D}^\prime(M\setminus I)$. 
If $\overline{WF(t)}|_I\subset C$, then
there exists a neighborhood $\mathcal{V}$ of $I$ 
such that for all $\tau$ 
defined by the formula
\begin{equation}
\forall \varphi\in\mathcal{D}(M),\left\langle\tau,\varphi\right\rangle=\left\langle t \psi,\pi^{\rho\star}i^\star\left(\partial^\alpha_h\varphi\right) \right\rangle,
\end{equation}
where $\psi$ 
vanishes in a 
neighborhood of $I$ and 
$\pi:\text{supp }\psi \mapsto I$
is a proper mapping and 
$\text{supp }\psi\subset\mathcal{V}$,
$WF(\tau)\subset C$. 
In particular,
$\tau$ is represented in a unique way by
$\tau=\sum_\alpha \partial_h^\alpha\left(i_\star\tau_\alpha\right)$
where $\forall\alpha,\tau_\alpha\in C^\infty(I)$.
\end{thm}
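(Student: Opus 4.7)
The plan is to assemble the theorem as a corollary of the three lemmas already at our disposal, so very little original work is required beyond checking that the hypotheses chain together cleanly.

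First, I would invoke Lemma \ref{lemmconorm}: since $\overline{WF(t)}|_I\subset C$, for any Euler vector field $\rho$ there exists an open neighborhood $\mathcal{V}$ of $I$ such that $WF(t)|_{\mathcal{V}}\cap C_\rho=\emptyset$. This is the $\mathcal{V}$ that the theorem's conclusion refers to; it depends only on $t$ (and the chosen $\rho$), not on the data $(\psi,\alpha)$ that enter the definition of $\tau$.

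Next, let $\tau$ be defined by the given formula with $\text{supp }\psi\subset\mathcal{V}$. Because $\psi\in C^\infty(M)$, the product $t\psi$ is well-defined in $\mathcal{D}^\prime(M\setminus I)$ and satisfies $WF(t\psi)\subset WF(t)|_{\mathcal{V}}$. In particular $WF(t\psi)\cap C_\rho=\emptyset$, so Lemma \ref{lemm3} applies verbatim to the single-multi-index expression and yields $WF(\tau)\subset C$. If the theorem's formula is read as implicitly summing over a finite collection of multi-indices $\alpha$ (or over a single one), the bound $WF(\tau)\subset C$ follows from the fact that $WF(\cdot)$ is subadditive under finite sums.

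Finally, $\tau$ is supported on $I$ (since $\pi^{\rho\star}i^\star(\partial^\alpha_h\varphi)$ depends only on the jet of $\varphi$ along $I$, so $\tau$ annihilates any $\varphi$ vanishing in a neighborhood of $I$). Lemma \ref{lemmbound} then provides both the unique decomposition $\tau=\sum_\alpha \partial^\alpha_h(i_\star\tau_\alpha)$ with respect to the fixed transverse coordinate system $(h^j)_j$, and the implication that the combination of $\text{supp }\tau\subset I$ and $WF(\tau)\subset C$ forces every $\tau_\alpha\in C^\infty(I)$. The only step with any real content is the first one, since everything downstream is a direct application of lemmas already proved; the mild point to be careful about is ensuring that $\mathcal{V}$ can be chosen uniformly in $(\psi,\alpha)$, which is exactly what Lemma \ref{lemmconorm} delivers since it produces $\mathcal{V}$ from $t$ alone.
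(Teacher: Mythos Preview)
Your proof is correct and follows exactly the approach the paper takes: the theorem is stated right after the sentence ``Combining Lemmas \ref{lemm3}, \ref{lemmbound}, \ref{lemmconorm} and fixing a system of coordinates functions $(h^j)_j$ transversal to $I$ yields the theorem,'' and you have simply spelled out how these three lemmas chain together.
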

%
%If a counterterm is a conormal distribution then $WF(t)$ does not meet some set $C_\rho$ which is a subcone of the cotangent space. We combine this constraint with
%the solution of the geometric equation \ref{ } and we deduce the following result
%\begin{thm}
%If $\forall \lambda\in[0,1], t_\lambda\in D^\prime_\Gamma$ then the following two claims are equivalent
%\\ there exists $V_1,V_2$ such that $I\subset V_1\subset V_2$ where the sequence of inclusions is in the sense of $\textbf{neighborhoods}$, $$\Gamma|_{V_2\setminus V_1}\cap C_\rho=\emptyset $$
%$$\Leftrightarrow $$
%$$\overline{\Gamma}|_I\subset C$$
%\end{thm}
\section{Counterexample.} 
We work in $T^\star\mathbb{R}^{n+d}$ 
with coordinates $(x,h;k,\xi)$ 
and $I=\{h=0\}$. 
In this section, we prove that for any $p\in  T^\bullet \mathbb{R}^{n+d}|_{I}$, we
can construct $t\in C^\infty(\mathbb{R}^{n+d}\setminus\{h=0\})\cap L^\infty(\mathbb{R}^{n+d})$ in such a way that $p\in WF(t)$. $t$ is a bounded function hence defines a unique element $t\in \mathcal{D}^\prime(\mathbb{R}^{n+d})$.
\begin{lemm}
For all $p=(x_0,0;k,\xi)\in T^\bullet \mathbb{R}^{n+d}|_{I}$,
there exists $t\in C^\infty(\mathbb{R}^{n+d}\setminus\{h=0\})\cap L^\infty(\mathbb{R}^{n+d})$ such that
$p\in WF(t)$. In particular, when $p=(0,0;\epsilon,0)$ then
we can choose
$$t(x,h)=\int_{\mathbb{R}^{n+d}} d\xi dk e^{i(x.k+h.\xi)}a(k,\xi)\left(1+\vert k\vert+\vert\xi\vert\right)^{-n-d-1},$$
where $a(k,\xi)=e^{-\frac{\vert k \vert^2+\vert\xi\vert^2-(k.\epsilon)^2}{(k.\epsilon)}}(1-\alpha(k,\xi))$ when $k.\epsilon>0$ and $0$ otherwise, where $\alpha=1$
in a neighborhood of $0$. 
\end{lemm}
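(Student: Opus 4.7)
The plan is to verify three properties of
$$t(x,h)=\int_{\mathbb{R}^{n+d}} e^{i(x\cdot k+h\cdot\xi)}\,b(k,\xi)\,dk\,d\xi,\qquad b(k,\xi):=a(k,\xi)(1+|k|+|\xi|)^{-n-d-1},$$
namely boundedness, smoothness off $\{h=0\}$, and membership of the test covector in $WF(t)$. The general case $p=(x_0,0;K_0,\Xi_0)\in T^\bullet\mathbb{R}^{n+d}|_I$ reduces to the special case by translating $x\mapsto x-x_0$ and replacing the distinguished ray $(\epsilon,0)\subset\mathbb{R}^{n+d}$ by the ray spanned by $e:=(K_0,\Xi_0)/|(K_0,\Xi_0)|\in S^{n+d-1}$; concretely one replaces $k\cdot\epsilon$ throughout by the full scalar product $(k,\xi)\cdot e$.

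Boundedness is immediate: Cauchy--Schwarz with $|\epsilon|=1$ gives $|k|^2-(k\cdot\epsilon)^2\geqslant 0$, so $|a|\leqslant 1$ and $|b|$ is dominated by the $L^1$ weight $(1+|k|+|\xi|)^{-n-d-1}$; hence $t\in C^0\cap L^\infty$. For the wave front set at $(0,0;\epsilon,0)$, localize by $\chi\in\mathcal{D}(\mathbb{R}^{n+d})$ with $\chi(0,0)\neq 0$ and write $\widehat{\chi t}(K,\Xi)=(\widehat{\chi}\ast b)(K,\Xi)$. The change of variable $k=\lambda\epsilon+u$ together with the identity $|\lambda\epsilon+u|^2-((\lambda\epsilon+u)\cdot\epsilon)^2=|u|^2-(u\cdot\epsilon)^2$ yields
$$a(\lambda\epsilon+u,\xi)=\exp\!\Bigl(-\tfrac{|u|^2-(u\cdot\epsilon)^2+|\xi|^2}{\lambda+u\cdot\epsilon}\Bigr)(1-\alpha(\lambda\epsilon+u,\xi)),$$
which tends to $1$ pointwise as $\lambda\to+\infty$. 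Dominated convergence, powered by the rapid decay of $\widehat{\chi}$, then gives $\lambda^{n+d+1}\widehat{\chi t}(\lambda\epsilon,0)\to(2\pi)^{n+d}\chi(0,0)\neq 0$; this is merely polynomial decay, hence not rapidly decreasing, so $(0,0;\epsilon,0)\in WF(t)$.

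The main obstacle is smoothness on $\{h\neq 0\}$. Differentiating in $x$ or $h$ brings down polynomial factors $k^\alpha\xi^\beta$ that destroy absolute convergence of the defining integral. The remedy, valid precisely because $h\neq 0$, is the identity $e^{ih\cdot\xi}=(i|h|^2)^{-1}h\cdot\nabla_\xi e^{ih\cdot\xi}$; iterating $N$ times and integrating by parts rewrites
$$\partial_x^\alpha\partial_h^\beta t(x,h)=\Bigl(\tfrac{-1}{i|h|^2}\Bigr)^N\!\int e^{i(x\cdot k+h\cdot\xi)}(ik)^\alpha(h\cdot\nabla_\xi)^N\!\bigl[(i\xi)^\beta b(k,\xi)\bigr]\,dk\,d\xi.$$
A direct calculation gives $\partial_\xi^\gamma a=Q_\gamma(\xi,(k\cdot\epsilon)^{-1})\,a$ with $Q_\gamma$ polynomial of degree $|\gamma|$; performing the $\xi$-integration first converts the Gaussian factor $e^{-|\xi|^2/(k\cdot\epsilon)}$ into polynomial powers of $k\cdot\epsilon$, and the subsequent $k$-integration uses the transverse Gaussian $e^{-(|k|^2-(k\cdot\epsilon)^2)/(k\cdot\epsilon)}$ together with the weight $(1+|k|)^{-n-d-1}$ to absorb the factor $|k|^{|\alpha|}$, provided $N$ is chosen large enough in terms of $|\alpha|,|\beta|$. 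On a compact neighborhood of any $(x_0,h_0)$ with $h_0\neq 0$ the integrand is then uniformly dominated, so dominated convergence gives continuity of all derivatives and hence $t\in C^\infty(\mathbb{R}^{n+d}\setminus\{h=0\})$.
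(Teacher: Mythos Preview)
Your proof is correct and proceeds by a genuinely different route from the paper's.

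For smoothness away from $\{h=0\}$, the paper does not integrate by parts by hand: it observes that $b$ is a smooth symbol and that the phase $x\cdot k+h\cdot\xi$ has no stationary point in $(k,\xi)$ unless $(x,h)=(0,0)$, then invokes the standard oscillatory-integral theorem (Reed--Simon, Theorem~9.47) to conclude that the singular support of $t$ is the single point $\{(0,0)\}$. This is stronger than what the lemma asks for, and it is exactly what makes the paper's reduction of the general case work: a linear change of variables sending the ray $(\epsilon,0)$ to an arbitrary direction $(K_0,\Xi_0)$ need not preserve $\{h=0\}$, but the transformed function is still smooth off a single point of $I$, hence in particular on $\{h\neq0\}$. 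Your direct argument via $(h\cdot\nabla_\xi)^N$ is more elementary and self-contained; the mechanism that makes it succeed is the Leibniz expansion, in which every surviving term has either many derivatives on the weight $(1+|k|+|\xi|)^{-n-d-1}$ (gaining full $|k|$-decay) or many on the Gaussian $e^{-|\xi|^2/(k\cdot\epsilon)}$ (gaining $(k\cdot\epsilon)^{-1/2}$-decay), and the cutoff $1-\alpha$ together with the full Gaussian localizes the integration to $k\cdot\epsilon\gtrsim 1$. Your reduction of the general case, replacing $k\cdot\epsilon$ by $(k,\xi)\cdot e$, is a different construction rather than a coordinate change, and the smoothness estimate then requires the same integration-by-parts idea with the Gaussian now coupling $k$ and $\xi$; this is fine but is where your sketch is briefest.

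For the wave-front-set part the paper argues indirectly: $\widehat{t}=b$ itself fails to decay along the ray, so $WF(t)$ is nonempty; combined with $\mathrm{sing\,supp}\,t=\{0\}$ and the decay of $b$ off the ray this pins $WF(t)$ down. Your computation of $\widehat{\chi t}=\widehat{\chi}\ast b$ and the limit $\lambda^{n+d+1}\widehat{\chi t}(\lambda\epsilon,0)\to(2\pi)^{n+d}\chi(0,0)$ is more direct and closer to the definition.
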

The contruction of $t$ was inspired by \cite{Hormanderwave} Example 8.2.4 p.~188 and
the lecture notes of Louis Boutet de Monvel \cite{BoutetDuke} (8.7) p.~80.

\begin{proof}
Without loss of generality, we can reduce to the specific case where $\varepsilon=(1,0,\dots,0)$ and $\xi=0$ by coordinate change.
Notice $t\in L^\infty(\mathbb{R}^{n+d})$, 
$$\vert t\vert\leqslant
\int_{\mathbb{R}^{n+d}}
d\xi dk\left
(1+\vert k\vert+\vert\xi\vert\right)^{-n-d-1}$$ 
and 
$$\widehat{t}(k,\xi)=e^{-\frac{\sum_{i=2}^n k_i^2+\vert\xi\vert^2}{k^2_1}}\left(1+\vert k\vert+\vert\xi\vert\right)^{-n-d-1}(1-\alpha)$$ 
does not decrease faster than any polynomial inverse when $k_2=\dots=k_n=\xi_1=\dots=\xi_d=0, k_1>0$ which implies by Proposition 8.1.3 p.~254 in \cite{Hormander} that $WF(t)$ is
\textbf{nonempty}. $\widehat{t}$ is a smooth symbol on $T^\bullet\mathbb{R}^{n+d}$ (\cite{RS} p.~98--99) which does not depend on $(x,h)$ and the Fourier phase $(x.k+h.\xi)$ has critical points
only at $x=h=0$ thus by Theorem 9.47 p.~102--103 in \cite{RS}, 
we find that the singular support
of $t$ reduces to $(0,0)$ thus
$WF(t)\subset T_{(0,0)}^\bullet\mathbb{R}^{n+d}$ and $t\in C^\infty(\mathbb{R}^{n+d}\setminus\{h=0\})\cap L^\infty(\mathbb{R}^{n+d})$. 
But $WF(t)$ should be non empty
and the projection
on the second factor $(x,h;k,\xi)\in T^\bullet\mathbb{R}^{n+d} \mapsto (k,\xi)\in\mathbb{R}^{n+d}$ should
be contained in $\{k_2=\dots=k_n=\xi_1=\dots=\xi_d=0, k_1> 0\}$
so $WF(t)=(0,0;\lambda \varepsilon,0),\lambda>0$.
\end{proof}
The distribution $t$ is bounded hence weakly homogeneous of degree $0$, thus
the extension $\lim_{\varepsilon\rightarrow 0}\int_\varepsilon^1 \frac{d\lambda}{\lambda} t\psi_{\lambda^{-1}}=\lim_{\varepsilon\rightarrow 0}t(1-\chi_{\varepsilon^{-1}}) $
exists in $\mathcal{D}^\prime(\mathbb{R}^{n+d})$ by Theorem \ref{thm1}, 
is unique in $E_0(\mathbb{R}^{n+d})$ by Theorem \ref{removsing} and just corresponds to the extension of $t$
in $\mathcal{D}^\prime$
by integration against
test functions.
However, $\forall \varepsilon,\int_\varepsilon^1 \frac{d\lambda}{\lambda} t\psi_{\lambda^{-1}}=t(1-\chi_{\varepsilon^{-1}})\in C^\infty(\mathbb{R}^{n+d})$:
\begin{thm}\label{contrexamplequitue}
For all $p=(x_0,0;k,\xi)\in T^\bullet \mathbb{R}^{n+d}|_{I}$,
there exists a smooth function $t\in E_0(\mathbb{R}^{n+d}\setminus I)$ (thus $WF(t)=\emptyset$) which has
a unique extension 
$\overline{t}$
in $E_0(\mathbb{R}^{n+d})$ such that $p\in WF(\overline{t})$.
\end{thm}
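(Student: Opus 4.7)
The plan is to use the distribution $t$ constructed in the preceding lemma directly as our example. Recall that lemma produces, for any prescribed $p=(x_0,0;k,\xi)\in T^\bullet\mathbb{R}^{n+d}|_I$, a function $t\in C^\infty(\mathbb{R}^{n+d}\setminus I)\cap L^\infty(\mathbb{R}^{n+d})$ with $p\in WF(t)$. The whole argument then consists in showing that this $t$, restricted to $\mathbb{R}^{n+d}\setminus I$, is a valid instance of the theorem: it is smooth off $I$, its unique extension in $E_0$ coincides with $t$ itself on the whole space, and hence its wave front set already contains $p$ on $I$.

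First, I would verify that $t|_{\mathbb{R}^{n+d}\setminus I}\in E_0(\mathbb{R}^{n+d}\setminus I)$. Since $t\in L^\infty$, for every $\varphi\in\mathcal{D}(\mathbb{R}^{n+d}\setminus I)$ and every $\lambda\in(0,1]$ we have
\begin{equation*}
|\lambda^{-0}\langle t_\lambda,\varphi\rangle|=\lambda^{-d}\Bigl|\int t(x,h)\varphi(x,\lambda^{-1}h)\,dx\,dh\Bigr|\leqslant \|t\|_{L^\infty}\,\|\varphi\|_{L^1},
\end{equation*}
so the family $\lambda^{-s}t_\lambda$ is uniformly bounded when tested against any $\varphi$; hence $t|_{\mathbb{R}^{n+d}\setminus I}$ lies in $E_0$. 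Moreover it is smooth there, so its wave front set on $\mathbb{R}^{n+d}\setminus I$ is empty.

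Second, I would invoke Theorem \ref{thm1} (applicable since $s+d=d>0$): the extension $\overline{t}\in E_0(\mathbb{R}^{n+d})$ given by $\lim_{\varepsilon\to 0}\langle t(1-\chi_{\varepsilon^{-1}}),\varphi\rangle$ exists. Now the original function $t$, viewed as a distribution on the whole $\mathbb{R}^{n+d}$ through the $L^\infty$ pairing, is itself an extension of its restriction, and it also belongs to $E_0(\mathbb{R}^{n+d})$ by the same bounded-function estimate above (applied on all of $\mathbb{R}^{n+d}$). By the removable singularity Theorem \ref{removsing} applied with $s+d>0$, the extension in $E_0(\mathbb{R}^{n+d})$ is \emph{unique}; therefore $\overline{t}=t$ as distributions on $\mathbb{R}^{n+d}$.

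Finally, the previous lemma gives $p\in WF(t)=WF(\overline{t})$, which is exactly the conclusion. The only potential obstacle is checking that the explicit $t$ built in the lemma is indeed in $L^\infty$ and that its restriction is smooth off $I$, but this was already established in the proof of that lemma (the weight $(1+|k|+|\xi|)^{-n-d-1}$ is integrable, and the symbol $a$ is compactly supported away from $0$ in the transverse cone, ensuring the singular support is reduced to $\{(0,0)\}\subset I$). No further calculation is needed beyond assembling these facts.
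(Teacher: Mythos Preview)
Your proposal is correct and follows essentially the same route as the paper: take the bounded smooth-off-$I$ function $t$ from the preceding lemma, observe that boundedness puts it in $E_0$, invoke Theorem~\ref{thm1} for existence of the extension and Theorem~\ref{removsing} for uniqueness (since $s+d=d>0$), and conclude that the unique $E_0$-extension $\overline{t}$ coincides with $t$ itself, hence inherits the wave front point $p$. Your write-up is in fact slightly more explicit than the paper's, which simply states these steps in the paragraph preceding the theorem; the only minor imprecision is in your parenthetical remark that the symbol $a$ is ``compactly supported away from $0$''---it is supported away from $0$ but not compactly (it is a Gaussian-type symbol), though this does not affect the argument.
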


\section{Appendix.}
\paragraph{The module structure of distributions supported on $I$.}
The concept of delta distribution $\delta_I$ of a 
submanifold $I$ is \emph{not intrinsically}
defined but 
a certain sheaf associated
to $I$ is canonically
defined: let $U$ be an open set
of $M$ and 
$(h^j)_{j=1,\cdots,d}\in \mathcal{I}(U)^d$ a collection of 
sections of the sheaf $\mathcal{I}$ 
of functions vanishing
on $I\cap U$ such that the
differentials
$dh^j,j=1,\cdots,d$ are
linearly independent ($(h^j)_{1\leqslant j\leqslant d}$
are transversal 
coordinates of 
a local chart). 
The map $h:U\mapsto \mathbb{R}^d$ 
allows to pullback $\delta^{\mathbb{R}^d}_0\in \mathcal{D}^\prime(\mathbb{R}^d)$
on $U$, and we denote this pullback $h^\star\delta^{\mathbb{R}^d}_{0}$ 
by $\delta_{\{h=0\}}$.
If we chose another system
of defining functions
$h^\prime$ for $I$, then
$\delta_{\{h^\prime=0\}}=\underset{\in C^\infty(I)}{\vert\frac{dh}{dh^\prime}\vert}\delta_{\{h=0\}}$,
where $\vert\frac{dh}{dh^\prime}\vert=\det(\frac{dh^j}{dh^{\prime i}})_{ij}$.
Thus the left module 
$C^\infty(I)\delta_{\{h=0\}}$ defined
over $U$ has
intrinsic meaning
(analoguous to the
space of sections 
of a vector bundle). 
Patching by
a partition
of unity gives a
sheaf of
modules of rank $1$
over $C^\infty(I)$.
Acting on the sections 
of this sheaf 
by differential operators
of order $k$ 
defines
a module of rank $\frac{d+k!}{d!k!}$ 
over $C^\infty(I)$.

\chapter{The microlocal extension.}
\paragraph{Introduction.}
Let $M$ be a smooth manifold and $I\subset M$ be a closed embedded submanifold of $M$.
In Chapter 2, we gave a necessary and sufficient condition on $WF(t),t\in\mathcal{D}^\prime(M\setminus I)$ that ensured that the union $\Gamma=\bigcup_{\lambda\in(0,1]}WF(t_\lambda)$ of the wave front sets of 
all scaled distribution $t_\lambda$ has the property $\overline{\Gamma}|_{I}\subset C$ where $C$ is the conormal of $I$. 
We saw this condition named \textbf{soft landing condition} (Definitions \ref{Broudermetric} and \ref{intrinsicslc}) was not sufficient to control the wave front set of the extension $\overline{t}$.
Our goal in this chapter is to add a boundedness condition which ensures the control of the wave front set of the extension.
Our plan starts with a 
geometric investigation 
of the dynamical properties of the scaling flow $e^{\log\lambda\rho}$ in cotangent space and show certain asymptotic behaviour of this flow. 
\section{Dynamics in cotangent space.}
In this section, we use the terminology 
and notation of
section $1$ of Chapter $2$.
We investigate 
the $\textbf{asymptotic behaviour}$ of the lifted flow $T^\star\Phi_\lambda$.
\paragraph{Decomposition in stable and unstable sets.}
We interpret $C,C_\rho$ as stable and unstable sets for the lifted flow $T^\star e^{t\rho}$ in cotangent space.
We work locally, let $p\in I$ and $V_p$ a neighborhood of $p$ in $M$, we fix a chart
$(x,h):V_p\mapsto \mathbb{R}^{n+d}$ in which $\rho=h^j\frac{\partial}{\partial h^j}$.
\begin{prop}\label{dynamicprop}
The flow $T^*e^{t\rho}$ lifted to the cotangent cone $T^\bullet V_p$ has the following property:
\begin{eqnarray}
\lim_{t\rightarrow +\infty} T^*e^{t\rho}(p)\in \left(C_\rho\cap T^\bullet V_p\right) 
\\ \lim_{t\rightarrow -\infty} T^*e^{t\rho}(p)\in \left(C \cap T^\bullet V_p\right)
\end{eqnarray}
in an \textbf{open dense} subset $T^\bullet V_p$.
\end{prop}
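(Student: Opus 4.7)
The plan is to reduce everything to an explicit coordinate calculation in a local chart adapted to $\rho$. Picking $(x,h) : V_p \to \mathbb{R}^{n+d}$ so that $\rho = h^j\partial_{h^j}$ and $I = \{h=0\}$, the base flow reads $e^{t\rho}(x,h) = (x, e^t h)$, and applying the formula $T^\star\Phi(x;\eta) = (\Phi(x), \eta\circ d\Phi^{-1})$ yields the lifted flow
\[
T^\star e^{t\rho}(x, h; k, \xi) = (x, e^t h; k, e^{-t}\xi).
\]
By Corollary~\ref{conjugcoro}, any two Euler vector fields are locally conjugate by an element of $G$, so there is no loss of generality in arguing in this canonical form.

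Next I would inspect the two asymptotic regimes. As $t\to -\infty$, the base point $e^t h$ contracts to $(x,0) \in I$, while the fiber component $e^{-t}\xi$ blows up whenever $\xi\neq 0$; read in the cosphere bundle $\mathbb{S}^\star V_p$, the codirection $(k, e^{-t}\xi)/|(k,e^{-t}\xi)|$ then converges to $(0, \xi/|\xi|)$, a point of $C$. Symmetrically, as $t\to +\infty$, the fiber component $e^{-t}\xi$ shrinks to zero and the codirection converges to $(k/|k|, 0)\in C_\rho$, provided $k\neq 0$. The natural open dense subset on which both limits hold is then the complement of $C\cup C_\rho$ in $T^\bullet V_p$, namely $\{k\neq 0\}\cap\{\xi\neq 0\}$; openness is immediate and density follows because $C$ and $C_\rho$ are each cut out by a linear vanishing condition on the fiber coordinates.

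The main obstacle is interpretational rather than computational: as $t\to +\infty$ with $h\neq 0$, the base point $e^t h$ generically exits the chart domain $V_p$, so the literal limit of $T^\star e^{t\rho}(p)$ in $T^\star V_p$ does not exist. I would therefore fix once and for all the convention that the limits in the statement are taken as limits of codirections in the cosphere bundle $\mathbb{S}^\star V_p$, or equivalently as $\omega$- and $\alpha$-limit sets of the projectivized lifted flow. Once this reading is explicit, the coordinate computation above yields the conclusion immediately, and the openness and density claims follow from the complement description. A minor but worthwhile check at the end is that both $C$ and $C_\rho$ are themselves invariant under the lifted flow, so the limits indeed land in, and not merely near, these sets.
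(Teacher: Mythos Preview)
Your approach is essentially identical to the paper's: reduce to the canonical chart with $\rho = h^j\partial_{h^j}$, write the lifted flow explicitly as $(x,h;k,\xi)\mapsto(x,e^th;k,e^{-t}\xi)$, and read off the two limits projectively in the fiber. The open dense subset you identify, $\{k\neq 0\}\cap\{\xi\neq 0\}$, is the intersection of the two generic sets the paper invokes separately (one for each limit), which is a harmless tightening.

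One remark: the interpretational obstacle you flag for $t\to+\infty$ (the base point $e^t h$ escaping $V_p$) is real and the paper glosses over it entirely---its proof even writes that $(x,e^th;k,e^{-t}\xi)$ ``converges to $(x,0;k,0)$'' as $t\to+\infty$, which is not literally correct for $h\neq 0$. Your proposed reading via the cosphere bundle handles the fiber direction but does not resolve the base-point escape; the honest resolution is that the proposition is heuristic, and what is actually used downstream (see the paragraph following the proof) is the weaker dynamical statement that the lifted flow pushes conic sets in the complement of $C$ toward the $C_\rho$-type fiber directions. Your awareness of this subtlety is an improvement over the paper's own treatment.
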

\begin{proof}
In coordinates $(x,h)$ 
in which $I=\{h=0\}$ 
and the flow has simple form  
$(x,h)\mapsto (x,e^th), $
the action lifts to 
$(x,h;k,\xi)\in T^\star \mathbb{R}^{n+d}\mapsto (x,e^th;k,e^{-t}\xi)\in T^\star \mathbb{R}^{n+d}$.
%in logarithmic parameterization $e^{\log\lambda \rho}$ acts via
%$$ (x,h,k,\xi)\mapsto (x,\lambda h,k,\lambda^{-1}\xi) .$$
We study the limit $t\rightarrow -\infty$, two cases arise:
\begin{itemize}
\item $\textbf{generically}$ $\xi\neq 0$, then $(x,e^t h;k,e^{-t}\xi)\sim (x,e^t h;e^t k,\xi) $ (because it is a cotangent cone) converges to $(x,0;0,\xi)$, it is immediate to deduce
$\{(x,0;0,\xi) \vert \xi\neq 0  \}=\left(T I\right)^\perp=C $
is the stable set of the flow. 
Notice the conormal bundle is an \textbf{intrinsic geometric object} and \textbf{does not depend on the choice} of vector field $\rho$.
\item Otherwise $\xi=0$, $(x,\lambda h; k,0)\rightarrow (x,0;k,0)$, the limit must lie in 
$\{(x,0;k,0)\vert k\neq 0\} \subset C^\rho$
which we will later see belongs to the unstable set.
\end{itemize}
% For instance, the fact that $p\notin C_\rho$ is $\textbf{preserved}$ by the flow, hence it is a robust fact. Furthermore, $\lim_{\lambda\rightarrow 0} T^*\Phi_\lambda(p)\in C$ and this is the crucial ingredient to prove that $\overline{\Gamma}|_I\subset C$. 
Conversely if $t\rightarrow\infty$:
\begin{itemize}
\item $\textbf{generically}$ $k\neq 0$, then $(x,e^t h;k,e^{-t}\xi)$ converges to $(x,0;k,0)$, it is immediate to deduce
$\{(x,h;k,0) \vert k\neq 0  \}=C_\rho$
is the unstable cone.
\end{itemize}
The flow $\lim_{t\rightarrow \infty}Te^{t\rho}$ sends all \textbf{conic sets in the complement} of $C$ to the coisotropic set $C_\rho$.
\end{proof}
Beware that the wave front set $WF\left(\Phi^*u\right)$ is the image of $WF(u)$ by the map $T^\star\Phi^{-1}$. If $\Phi=e^{\log\lambda\rho}$ then the interesting flow for the pull back will be $T^\star e^{-\log\lambda\rho}$ when $\lambda\rightarrow 0$.
This is why the properties established in the proposition \ref{dynamicprop} are crucial in the proof of the main theorem. Especially, we will use the fact that the flow $Te^{-\log\lambda\rho}$, when $\lambda\rightarrow 0$ sends all \textbf{conic sets in the complement} of $C$ to the coisotropic set $C_\rho$.
\subsection{Definitions.}
In this subsection, we recall results on distribution spaces that we will use in our proof of the main theorem which controls the wave front set of the extension. Furthermore the seminorms that we define here allow to write proper estimates.
%
% We recall the meaning of the strong topology of $D_m^\prime(K)$ which is the subspace of distribution $D_m^\prime(K)$ compactly supported in $K$ of order $m$.
%
% By a deep structure theorem of Laurent Schwartz, each element $t\in D_m^\prime(K)$ can be $\emph{uniquely}$ represented as a finite sum
%\begin{equation}
%\left\langle t , \varphi\right\rangle=\sum_{\vert\alpha\vert\leqslant m} \mu_\alpha(\partial^\alpha \varphi)
%\end{equation} 
%where $\mu_\alpha$ are $\textbf{Radon measures}$ (not necessarily positive but linear continuous on $C^0(K)$) supported on $K$.
%
%
% A basis of neighborhood of the origin in $D_m^\prime(K)$ for this topology is of the form $\{t \vert \vert\left\langle t,\varphi \right\rangle\vert\leqslant \varepsilon \sup_{x\in K,\vert\alpha\vert\leqslant m} \vert \partial^\alpha\varphi(x)\vert \}$.
We denote by $\theta$
the weight function $\xi\mapsto (1+\vert\xi\vert)$. 
For any cone $\Gamma\subset T^\star\mathbb{R}^d$, let $\mathcal{D}^\prime_\Gamma$ be the set of distributions with wave front set in $\Gamma$.
We define the set of seminorms $\Vert .\Vert_{N,V,\chi}$ 
on $\mathcal{D}^\prime_\Gamma$. 
\begin{defi}
For all $\chi\in \mathcal{D}(\mathbb{R}^{d})$, for all closed cone $V\subset\mathbb{R}^{d}\setminus \{0\}$ such that
$\left(\text{supp }\chi\times V\right)\cap \Gamma=\emptyset$,  
$\Vert t\Vert_{N,V,\chi}= \sup_{\xi\in V}\vert(1+\vert\xi\vert)^N\widehat{t\chi}(\xi)\vert$.
\end{defi}
We recall the definition
of the topology $\mathcal{D}^\prime_\Gamma$ (see \cite{Alesker} p.~14),
\begin{defi}
The topology of $\mathcal{D}^\prime_\Gamma $
is the weakest topology
that makes all seminorms
$\Vert .\Vert_{N,V,\chi}$
continuous
and which is stronger
than the weak topology
of $\mathcal{D}^\prime(\mathbb{R}^d)$.
%and also for any compactly supported distribution in $D^\prime(K)$ of order $m$,
%\begin{equation}
%\Vert (1+\vert k\vert+\vert\xi\vert)^{-m} \widehat{t} \Vert_{L^\infty}<\infty
%\end{equation}
Or it can be formulated as the topology defined
by
all seminorms
$\Vert .\Vert_{N,V,\chi}$
and the seminorms of the weak topology:
\begin{equation}
\forall \varphi\in \mathcal{D}\left(\mathbb{R}^{d}\right) ,\vert\left\langle t,\varphi \right\rangle\vert=P_\varphi\left(t\right).
\end{equation}
\end{defi}
%\begin{defi}
%$\lambda\mapsto c_\lambda \in L^p_{\frac{d\lambda}{\lambda}}([0,1],D_\Gamma^\prime),p\in [1,\infty)$ iff 
%\\for all $ \chi\in C_c^\infty(\mathbb{R}^{n+d})$,  for all closed cone $V\subset\mathbb{R}^{n+d *}$
%$$\left(\text{supp }\chi\times V\right)\cap \Gamma=\emptyset \implies  \int_0^1 \frac{d\lambda}{\lambda}\Vert c_\lambda\Vert^p_{N,V,\chi}<\infty  $$
%\end{defi}
We say that $B$ is bounded in $\mathcal{D}^\prime_\Gamma$, if $B$ is bounded in $\mathcal{D}^\prime$ and if for all seminorms $\Vert .\Vert_{N,V,\chi}$ defining the topology of $\mathcal{D}^\prime_\Gamma$,
$$\sup_{t\in B} \Vert t\Vert_{N,V,\chi}<\infty .$$
We also use the seminorms:
$$\forall \varphi\in\mathcal{D}(\mathbb{R}^d), \pi_m(\varphi)=\sup_{\vert\alpha\vert\leqslant m} \Vert \partial^\alpha\varphi\Vert_{L^\infty(\mathbb{R}^d)},$$
$$\forall \varphi\in\mathcal{E}(\mathbb{R}^d),\forall K\subset \mathbb{R}^d, \pi_{m,K}(\varphi)=\sup_{\vert\alpha\vert\leqslant m} \Vert \partial^\alpha\varphi\Vert_{L^\infty(K)}.$$
\section{Main theorem.} 
In this section, we prove the main theorem of this chapter which gives a sufficient condition to control the wave front set of the extension $\overline{t}$. 
The condition is as follows: Let $t\in E_s(M\setminus I)$ and assume $WF(t)$ satisfies 
the \textbf{soft landing condition},
and assume that $\lambda^{-s}t_\lambda$ is \textbf{bounded} in $\mathcal{D}^\prime_\Gamma$ where $\Gamma=\bigcup_{\lambda\in(0,1]}WF(t_\lambda)$. 
Then our theorem claims that $WF(\overline{t})\subset \overline{WF(t)}\cup C$ for the extension $\overline{t}$. 
\begin{thm}\label{mainthm} 
Let $s\in\mathbb{R}$ such that $s+d>0$, $\mathcal{V}$ be a $\rho$-convex neighborhood of $I$
and $t\in \mathcal{D}^\prime(\mathcal{V}\setminus I)$. 
Assume that $WF(t)$ satisfies the soft landing condition
and that $\lambda^{-s}t_\lambda$ is \textbf{bounded} in $\mathcal{D}^\prime_\Gamma(\mathcal{V}\setminus I)$ where 
$\Gamma=\bigcup_{\lambda\in(0,1]} WF(t_\lambda)\subset T^\bullet \left(M\setminus I\right)$. 
Then the wave front set  
of the extension
$\overline{t}$ 
of $t$ given by Theorem \ref{thm1}
%$$ \lambda^{-(s+d)}t\psi_{\lambda^{-1}}\in L_{\frac{d\lambda}{\lambda}}^\infty((0,1],D^\prime_{\overline{WF(t)}\cup C})$$
%and 
is such that $WF(\overline{t})\subset WF(t)\cup C$. 
\end{thm}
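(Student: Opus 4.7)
The plan is to localize at a point $p=(x_0,0;k_0,\xi_0)$ on $I$ with $k_0\neq 0$: off $I$ the extension coincides with $t$ and nothing is to prove, and on $I$ the complement of $C$ is exactly $\{k_0\neq 0\}$. Since $s+d>0$, Theorem~\ref{thm1} provides the convergent representation $\overline{t}=t(1-\chi)+\int_0^1\frac{d\lambda}{\lambda}\,t\psi_{\lambda^{-1}}$, which by the H\"ormander identity may be rewritten as $\overline{t}=t(1-\chi_{\lambda_0^{-1}})+\int_0^{\lambda_0}\frac{d\lambda}{\lambda}\,t\psi_{\lambda^{-1}}$ for any $\lambda_0\in(0,1]$ to be chosen below. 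Picking a cutoff $\varphi\in\mathcal{D}(\mathcal{V})$ whose support lies so close to $(x_0,0)$ that $\varphi(1-\chi_{\lambda_0^{-1}})=0$ kills the first term, reducing the problem to rapid decay of the Fourier transform of the integral piece in a closed conic neighborhood $V$ of $(k_0,\xi_0)$.

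Introducing $u^\lambda=\lambda^{-s}t_\lambda\psi$, which by hypothesis is bounded in $\mathcal{D}^\prime_\Gamma$ and uniformly supported on an annulus $\{a\le|h|\le b\}$ over $\supp\varphi$, the identity $t\psi_{\lambda^{-1}}=\lambda^s(u^\lambda)_{\lambda^{-1}}$ and the change of variable $h\mapsto\lambda h$ under the Fourier integral yield
\[
\widehat{\varphi\overline{t}}(k,\xi)=\int_0^{\lambda_0}\frac{d\lambda}{\lambda}\,\lambda^{s+d}\,\widehat{\phi_\lambda u^\lambda}(k,\lambda\xi),
\]
where $\phi_\lambda(x,h)=\tilde\varphi(x,h)\,\varphi(x,\lambda h)$ and $\tilde\varphi$ is a fixed cutoff equal to $1$ on the compact set $K$ containing $\supp\psi\cap\supp\varphi_\lambda$ for every $\lambda\in(0,1]$. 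Since $\partial^\alpha_h\varphi_\lambda=\lambda^{|\alpha|}(\partial^\alpha_h\varphi)(x,\lambda h)$, the family $(\phi_\lambda)_{\lambda\in(0,1]}$ is bounded in $\mathcal{D}(K)$.

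The soft landing condition gives $\Gamma\subset\{|k|\le\delta|h||\xi|\}$ locally, so over the annulus every direction $(k,\xi)$ in $\Gamma$ satisfies $|k|\le\delta b|\xi|$. Since $k_0\neq 0$, $V$ can be chosen so narrow that $|k|\ge c_V|\xi|$ on $V$; then for $\lambda\le\lambda_0:=c_V/(2\delta b)$ and $(k,\xi)\in V$, the frequency $(k,\lambda\xi)$ lies in a closed conic neighborhood $V^\prime$ of $(k_0,0)$ with $(K\times V^\prime)\cap\Gamma=\emptyset$. The key remaining step is the uniform bound
\[
\sup_{\lambda\in(0,\lambda_0]}|\widehat{\phi_\lambda u^\lambda}(\eta)|\le C_N(1+|\eta|)^{-N},\qquad \eta\in V^\prime,
\]
which is the main obstacle of the proof: the smooth cutoff $\phi_\lambda$ varies with $\lambda$, so one cannot read this off directly from Definition~\ref{dprimegam}. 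This is overcome by writing $\widehat{\phi_\lambda u^\lambda}=(2\pi)^{-(n+d)}\,\widehat{\phi_\lambda}\ast\widehat{\tilde\chi u^\lambda}$ for a fixed $\tilde\chi\in\mathcal{D}$ equal to $1$ on $K$, noting that $\widehat{\phi_\lambda}$ is uniformly Schwartz while $\widehat{\tilde\chi u^\lambda}$ has uniform polynomial bound together with uniform rapid decay on $V^\prime$ by $\mathcal{D}^\prime_\Gamma$-boundedness, and then running the standard cone-splitting of the convolution integral.

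Once this estimate is in hand, integrating against $\lambda^{s+d-1}$, which is integrable because $s+d>0$, gives
\[
|\widehat{\varphi\overline{t}}(k,\xi)|\le C_N\int_0^{\lambda_0}\lambda^{s+d-1}(1+|k|+\lambda|\xi|)^{-N}\,d\lambda\le \frac{C_N\lambda_0^{s+d}}{s+d}(1+|k|)^{-N},
\]
and since $|k|\ge\tfrac{c_V}{1+c_V}(|k|+|\xi|)$ on $V$, this is rapid decay in $(k,\xi)$ throughout $V$, so $p\notin WF(\overline{t})$, as required.
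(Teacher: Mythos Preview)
Your proof is correct and follows essentially the same strategy as the paper: localize at $p=(x_0,0;k_0,\xi_0)$ with $k_0\neq 0$, use the scaling identity to rewrite $\widehat{\varphi\,t\psi_{\lambda^{-1}}}(k,\xi)$ as $\lambda^{s+d}$ times the Fourier transform of the bounded family $u^\lambda=\lambda^{-s}t_\lambda\psi$ evaluated at $(k,\lambda\xi)$, invoke the soft landing condition to show that this scaled frequency stays in a cone disjoint from $\Gamma$ over the annulus, handle the $\lambda$-dependent cutoff by a convolution cone-splitting argument (which is exactly the content of the paper's Theorem~\ref{producttheoremeskin}), and integrate against $\lambda^{s+d-1}$. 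The only cosmetic differences are that the paper splits the integral at a parameter $\varepsilon$ and keeps the piece $t(\chi-\chi_{\varepsilon^{-1}})$ explicit, whereas you absorb it by shrinking $\supp\varphi$; and the paper packages the convolution estimate as a separate lemma with an auxiliary cone $W\supset V$, whereas you sketch it inline.
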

We saw in Chapter $2$ that the hypothesis that $WF(t)$ satisfies 
the \textbf{soft landing condition} is \emph{equivalent} to the requirement that $ \overline{\Gamma}|_I\subset C$ in particular, this implies that $\Gamma\cap C_\rho=\emptyset$ in a sufficiently small neighborhood of $I$ and $\overline{WF(t)}|_I\subset \overline{\Gamma}|_I\subset C$. Hence we have the relation $WF(\overline{t})\subset WF(t)\cup C=\overline{WF(t)}\cup C$.
\subsection{Proof of the main theorem.}
For the proof,
it suffices 
to work in flat space 
$\mathbb{R}^{n+d}$ 
with coordinates 
$(x,h)\in\mathbb{R}^n\times\mathbb{R}^d$ 
where $I=\{h=0\}$ and 
$\rho=h^j\frac{\partial}{\partial h^j}$,
since the hypothesis 
of the theorem and 
the result are 
local 
and open
properties. 

\begin{proof}
We denote by $\Xi$ the set $\overline{WF(t)}\cup C$.
The weight function $(1+\vert k\vert+\vert\xi\vert)$
is denoted by $\theta$.
In order to establish the inclusion
$WF(\overline{t})\subset \Xi$,
it suffices to prove that
for all
$p=(x_0,h_0;k_0,\xi_0)\notin \Xi$,
there exists $\chi$ s.t. $\chi(x_0,h_0)\neq 0$, $V$ a closed conic neighborhood of $(k_0,\xi_0)$
such that $\Vert \overline{t}\Vert_{N,V,\chi}<+\infty$ for all $N$. 
Let $p=(x_0,h_0;k_0,\xi_0)\notin \Xi$, then:

 Either $h_0\neq 0$, and we choose $\chi$ in such a way that $\chi=0$ on $I$ thus $t\chi=\overline{t}\chi$ and we are done since $\Vert \overline{t}\Vert_{N,V,\chi}=\Vert t\Vert_{N,V,\chi}<+\infty$.

 Either $h_0=0$ thus $k_0\neq 0$ since $p\notin C$. 
Since $\vert k_0\vert>0$, 
there exists $\delta^\prime>0$ s.t.
$$\vert k_0\vert\geqslant 2\delta^\prime\vert\xi_0\vert .$$
We set $V=\{(k,\xi)| \vert k\vert\geqslant \delta^\prime\vert\xi\vert\}$.
By the soft landing condition, 
$$\exists \varepsilon_1>0,\exists\delta>0,\, WF(t)|_{\vert h\vert\leqslant \varepsilon_1}\subset  
\{\vert k\vert\leqslant \delta \vert h\vert\vert \xi\vert \},$$
$$\text{and } \Gamma|_{\vert h\vert\leqslant \varepsilon_1}\subset  
\{\vert k\vert\leqslant \delta \vert h\vert\vert \xi\vert \}.$$
If we choose $\varepsilon>0$ in such a way that $\delta\varepsilon<\delta^\prime$ and $\varepsilon<\varepsilon_1$,
then for any function $\chi$ s.t. 
$\text{supp }\chi\subset \{\vert h\vert\leqslant \varepsilon\}$,
by the previous steps, we obtain that $\left(\text{supp }\chi\times V\right)\cap \Gamma=\emptyset$.
From now on, $\chi$ and $V$ are given. 
\begin{enumerate}
\item Recall $\psi=-\rho\chi^\prime$ is the Littlewood--Paley function on $\mathbb{R}^{n+d}$, and $\text{supp }\psi=\{a\leqslant \vert h\vert\leqslant 1\},0<a<1$ 
does not meet $I=\{h=0\}$. 
$\psi$ is defined on 
$\mathbb{R}^{n+d}$ 
but is not compactly 
supported in the $x$ variable.
We start from the definition of scaling given in Meyer (\cite{Meyer}) Definition 2.1 p.~45 Definition 2.2 p.~46:
$$\left\langle t_\lambda\psi,g \right\rangle=\lambda^{-d}\left\langle t\psi_{\lambda^{-1}},g_{\lambda^{-1}} \right\rangle  .$$ 
We pick the test functions $g$ defined by:
$$ g(x,h)=e^{-i(kx+\xi h)}\chi(x,h), $$
then application of the identity which 
defines the scaling gives: 
$$\widehat{t\psi_{\lambda^{-1}}\chi}=\lambda^{d}\widehat{t_\lambda\chi_\lambda\psi}(k,\lambda\xi) $$
The trick is to notice that 
$\psi\chi_\lambda$ has a compact support which does not meet $I=\{h=0\}$, because $\text{supp }\psi\subset \{a\leqslant\vert h\vert\leqslant b\}$ and $\chi(x,\lambda h)$ is compactly supported in $x$
uniformly in $\lambda$.
Thus we can find a compact subset $K\subset\mathbb{R}^{n+d}$ 
such that $\forall \lambda, \text{ supp }\chi_\lambda\psi\subset K$ 
and $K\cap I=\emptyset$ hence the above Fourier transforms
are well defined. 
Set the family of cones 
$V_\lambda=\{(k,\lambda\xi)\vert (x,\xi) \in V \}$. 
By definition of the seminorms $\Vert .\Vert_{N,V,\chi}$, 
we get
$$\Vert t\psi_{\lambda^{-1}} \Vert_{N,V,\chi}=\sup_{(k,\xi)\in V}(1+\vert k\vert+\vert\xi\vert)^N\vert \widehat{t\psi_{\lambda^{-1}}\chi}\vert$$ 
$$=\sup_{(k,\xi)\in V}(1+\vert k\vert+\vert\xi\vert)^N\lambda^{d}\vert\widehat{t_\lambda\chi_\lambda\psi}\vert(k,\lambda\xi),$$ 
we isolate the interesting term 
$$(1+\vert k\vert+\vert\xi\vert)^N\lambda^{d}\vert\widehat{t_\lambda\chi_\lambda\psi}\vert(k,\lambda\xi)=\frac{(1+\vert k\vert+\vert\xi\vert)^N}{(1+\vert k\vert+\lambda\vert\xi\vert)^N}(1+\vert k\vert+\lambda\vert\xi\vert)^N\lambda^{d}\vert\widehat{t_\lambda\chi_\lambda\psi}\vert(k,\lambda\xi).$$ 
We also have
$$\sup_{(k,\xi)\in V}(1+\vert k\vert+\lambda\vert\xi\vert)^N\lambda^{d}\vert\widehat{t_\lambda\chi_\lambda\psi}\vert(k,\lambda\xi)\leqslant \Vert \lambda^{d}t_\lambda\psi \Vert_{N,V_\lambda,\chi_\lambda},$$
by definition of $V_\lambda=\{(k,\lambda\xi)|(k,\xi)\in V\}$. 
\item Hence, we are reduced to prove that the quantity
$\frac{(1+\vert k\vert+\vert\xi\vert)^N}{(1+\vert k\vert+\lambda\vert\xi\vert)^N}$ remains bounded
for $(k,\xi)\in V$.
If so, we are able to apply estimates in Step 2 
to bound $\Vert t\psi_{\lambda^{-1}}\Vert_{N,V,\chi}$ 
in function of 
$\Vert \lambda^dt_\lambda\psi \Vert_{N,V_\lambda,\chi_\lambda}$.
The difficulty comes from the values of $\lambda$ close to $\lambda=0$.
But we find the following condition
\begin{equation}\label{monequa} 
\sup_{\lambda\in(0,1],(k,\xi)\in V}\frac{(1+\vert k\vert+\vert\xi\vert)^N}{(1+\vert k\vert+\lambda\vert\xi\vert)^N}<(1+\delta^{\prime-1})^N,
\end{equation}
this follows from:
$$(k,\xi)\in V\implies\delta^\prime\vert \xi\vert\leqslant \vert k\vert  $$
$$\implies 1\leqslant\frac{1+\vert k\vert+\vert\xi\vert}{1+\vert k\vert + \lambda\vert\xi\vert} \leqslant \frac{1+(1+\delta^{\prime -1})\vert k\vert}{1+\vert k\vert} \leqslant(1+\delta^{\prime -1}),$$
and implies the estimate 
$$\Vert t\psi_{\lambda^{-1}}\Vert_{N,V,\chi}\leqslant\lambda^{d}C\Vert 
t_\lambda\psi \Vert_{N,V_\lambda,\chi_\lambda},
$$
where $C=(1+\delta^{\prime-1})^N$.
By rescaling, we also have
\begin{equation}\label{aprioriestimate}
\forall\varepsilon>0, \, \Vert t\psi_{\lambda^{-1}}\Vert_{N,V,\chi}\leqslant\left(\frac{\lambda}{\varepsilon}\right)^{d}C\Vert 
t_{\frac{\lambda}{\varepsilon}}\psi_{\varepsilon^{-1}} \Vert_{N,V_{\frac{\lambda}{\varepsilon}},\chi_{\frac{\lambda}{\varepsilon}}}. 
\end{equation}
\item We return to
$V\subset \{\vert k\vert\geqslant \delta^\prime \vert\xi\vert \}$
thus
$$\text{supp }\chi \times V\subset \{\vert k\vert\geqslant \delta^\prime \vert h\vert\vert\xi\vert \} $$
since $\text{supp }\chi\subset\{\vert h\vert\leqslant\varepsilon\}$ 
and $\varepsilon$ can always be chosen $\leqslant 1$. 
For all $\lambda\leqslant\varepsilon$, 
we have the sequence of inclusions:
$$\text{supp }(\chi_{\frac{\lambda}{\varepsilon}}\psi_{\varepsilon^{-1}}) \times V_{\frac{\lambda}{\varepsilon}}\subset \text{supp }\chi\psi_{\varepsilon^{-1}} \times V \subset \{\vert k\vert\geqslant \delta^\prime \vert h\vert\vert\xi\vert \}, $$
from which we deduce an improvement of the rescaled estimate 
(\ref{aprioriestimate}):
$$\forall\lambda\leqslant \varepsilon, \Vert 
t_{\frac{\lambda}{\varepsilon}}\psi_{\varepsilon^{-1}} \Vert_{N,V_{\frac{\lambda}{\varepsilon}},\chi_{\frac{\lambda}{\varepsilon}}}\leqslant \Vert 
t_{\frac{\lambda}{\varepsilon}}\psi_{\varepsilon^{-1}}\chi_{\frac{\lambda}{\varepsilon}}\Vert_{N,V,\varphi^\prime}$$
for some function $\varphi^\prime\in\mathcal{D}(\mathbb{R}^{n+d})$ 
s.t. $\varphi^\prime=1$ 
on $\text{supp }\chi\psi_{\varepsilon^{-1}}$, 
$\varphi^\prime=0$ in a neighborhood of $I$ 
and $\left(\text{supp }\varphi^\prime\times V\right)\cap \Gamma=\emptyset$ 
(such $\varphi^\prime$ always exists by choosing $\varepsilon$ small enough in the first step of the proof
and by choosing $\text{supp }\varphi^\prime$ slightly larger than $\text{supp }\chi\psi_{\varepsilon^{-1}}$). 
We have gained the fact that the term $ \Vert 
t_{\frac{\lambda}{\varepsilon}}\psi_{\varepsilon^{-1}}\chi_{\frac{\lambda}{\varepsilon}}\Vert_{N,V,\varphi^\prime}$ on the r.h.s. is expressed in terms of a seminorm $\Vert.\Vert_{N,V,\varphi^\prime}$ where the cone $V$ does not depend on $\lambda$.
We still have to get rid of the dependance of the function $\psi_{\varepsilon^{-1}}\chi_{\frac{\lambda}{\varepsilon}}$ in $\lambda$. 
We use our estimates for the product 
of a smooth function and a distribution 
(see Estimate \ref{corollaire en or wanted by Christian}), 
for any $\textbf{arbitrary}$ cone $W$ 
which is a $\textbf{neighborhood}$ of $V$:
\begin{equation}\label{est45}
\begin{array}{c}
\Vert 
t_{\frac{\lambda}{\varepsilon}}\psi_{\varepsilon^{-1}}\chi_{\frac{\lambda}{\varepsilon}}\Vert_{N,V,\varphi^\prime}\\
\leqslant C \pi_{2N}\left(\psi_{\varepsilon^{-1}}\chi_{\frac{\lambda}{\varepsilon}}\right)\left(\Vert t_{\frac{\lambda}{\varepsilon}} \Vert_{N,W,\varphi^\prime} + \Vert\theta^{-m} t_{\frac{\lambda}{\varepsilon}}\varphi^\prime\Vert_{L^\infty} \right),
\end{array}
\end{equation}
where $\Vert . \Vert_{N,W,\varphi^\prime}$
is a seminorm of $\mathcal{D}^\prime_\Gamma$.
By using the hypothesis of the theorem
that $\lambda^{-s}t_\lambda$ is bounded in $\mathcal{D}^\prime_\Gamma$, we deduce that
$$\sup_{\lambda\in(0,\varepsilon]}\left(\frac{\lambda}{\varepsilon}\right)^{-s}\Vert t_{\frac{\lambda}{\varepsilon}} \Vert_{N,W,\varphi^\prime}<+\infty.$$
The above inequality
combined with the estimate (\ref{est45}), 
the estimate \ref{aprioriestimate}
and Theorem \ref{boundfourier} 
applied to the bounded family $\left(\lambda^{-s}t_\lambda\right)_{\lambda\in(0,1]}$
gives us: 
$$\forall\lambda\leqslant \varepsilon,\exists C^\prime, \Vert t\psi_{\lambda^{-1}}\Vert_{N,V,\chi} \leqslant C^\prime\left(\frac{\lambda}{\varepsilon}\right)^{s+d}.$$
\item This suggests we should decompose the integral $\int_0^1\frac{d\lambda}{\lambda}  t\psi_{\lambda^{-1}}$ in two parts:
$$\Vert\overline{t}\Vert_{N,V,\chi}=\Vert\int_0^1\frac{d\lambda}{\lambda}  t\psi_{\lambda^{-1}}\Vert_{N,V,\chi} $$ 
$$\leqslant \Vert \int_0^{\varepsilon}\frac{d\lambda}{\lambda}  t\psi_{\lambda^{-1}}\Vert_{N,V,\chi} + \Vert\int_{\varepsilon}^1\frac{d\lambda}{\lambda}  t\psi_{\lambda^{-1}}\Vert_{N,V,\chi} $$
$$\leqslant \int_0^{\varepsilon}\frac{d\lambda}{\lambda} \Vert t\psi_{\lambda^{-1}}\Vert_{N,V,\chi}+ \underset{<+\infty}{\underbrace{\Vert t(\chi-\chi_{\varepsilon^{-1}})\Vert_{N,V,\chi}}}, $$
because $t(\chi-\chi_{\varepsilon^{-1}})$ is supported away from $\{h=0\}$.
This reduces the study to 
$\int_0^{\varepsilon}\frac{d\lambda}{\lambda} \Vert t\psi_{\lambda^{-1}}\Vert_{N,V,\chi}$
which is bounded by $C^\prime\int_0^{\varepsilon}\frac{d\lambda}{\lambda} \left(\frac{\lambda}{\varepsilon}\right)^{s+d}<+\infty$.
\end{enumerate} 
We try to give an explicit
bound which ``summarizes'' 
all our previous arguments:
\begin{equation}\label{goodbound}
\begin{array}{c}
 \int_0^{\varepsilon}\frac{d\lambda}{\lambda} \Vert t\psi_{\lambda^{-1}}\Vert_{N,V,\chi}\\ 
\leqslant \frac{C\varepsilon^{s+d}}{2^{s+d}(s+d)}
\underset{\lambda\in(0,\varepsilon]}{\sup}
\left(\frac{\lambda}{\varepsilon}\right)^{-s}\pi_{2N}(\psi_{\varepsilon^{-1}}\chi_{\frac{\lambda}{\varepsilon}})\left(\Vert t_{\frac{\lambda}{\varepsilon}} \Vert_{N,W,\varphi^\prime} 
+ \Vert\theta^{-m}
\widehat{t_{\frac{\lambda}{\varepsilon}} } \varphi^\prime\Vert_{L^\infty}\right).
\end{array}
\end{equation}
\end{proof}

\paragraph{What do we need to reproduce the estimate (\ref{goodbound}) for families?}
We keep the same notation as in the proof 
and statement of theorem
(\ref{mainthm}).
The previous proof works for a fixed distribution $t$. 
We would like to reconsider 
the proof of the main theorem 
for a family $(t_\mu)_{\mu}$ 
of distributions bounded in $\mathcal{D}^\prime_\Gamma$. 
The validity of the previous theorem 
relied on the final estimate (\ref{goodbound}):
\begin{equation}
\begin{array}{c}
\int_0^{\varepsilon}\frac{d\lambda}{\lambda} \Vert t\psi_{\lambda^{-1}}\Vert_{N,V,\chi} \\
\leqslant \frac{C\varepsilon^{s+d}}{2^{s+d}(s+d)}
\underset{\lambda\in(0,\varepsilon]}{\sup}
\left(\frac{\lambda}{\varepsilon}\right)^{-s}\pi_{2N}(\psi_{\varepsilon^{-1}}\chi_{\frac{\lambda}{\varepsilon}})\left(\Vert t_{\frac{\lambda}{\varepsilon}} \Vert_{N,W,\varphi^\prime} + 
\Vert \theta^{-m} 
\widehat{t_{\frac{\lambda}{\varepsilon}} } \varphi^\prime\Vert_{L^\infty}\right).
\end{array}
\end{equation}
where the constants 
of the inequality
are \textbf{independent} 
of $t$. 
Hence the proof and the final estimate still works 
for the family of distributions 
$\mu^{-s}t_\mu$ since
the family $\lambda^{-s}(\mu^{-s}t_\mu)_\lambda=(\lambda\mu)^{-s}t_{\lambda\mu}$ is bounded in $\mathcal{D}^\prime_\Gamma(\mathcal{V}\setminus I)$ uniformly in $(\lambda,\mu)$.
Thus we have the proposition:
%\begin{thm} 
%Let $t_i\in E_s(M\setminus I)$.
%Assume the cone $\Gamma=\bigcup_{\lambda\in(0,1],i} WF(t_{i\lambda})\subset T^\bullet \left(M\setminus I\right) $ satisfies the constraint $\overline{\Gamma}|_I\subset C$. 
%And assume there is a $\rho$-convex neighborhood $V$ of $I$ such that the family $\left(\lambda^{-s}(t_{i\lambda}\right)_{(\lambda,i)\in [0,1]\times I}$ is \textbf{bounded} in $\mathcal{D}^\prime_\Gamma$ for $s+d>0$ then: 
% 
%
%$$ (\lambda^{-(s+d)}t_i\psi_{\lambda^{-1}})_{\lambda,i}$$
%is bounded in $D^\prime_{\overline{\Gamma}\cup C}$ and
%
%for each $i\in I$ the integral $\overline{t_i}=\int_0^1\frac{d\lambda}{\lambda} t_i\psi(\frac{h}{\lambda}) $ converges in $D^\prime_{\overline{WF(t_i)}\cup C}$ and the family $(\overline{t_i})_{I}$ is bounded in $D^\prime_{\overline{\Gamma}\cup C}$.
%
% 
%\end{thm}
\begin{prop}
If $t$ satisfies the assumptions of theorem (\ref{mainthm}),
then
the family $(\mu^{-s}\overline{t}_\mu)_{\mu\in(0,1]}$ 
is bounded in $\mathcal{D}^\prime_{\Gamma\cup C}(\mathcal{V})$.
\end{prop}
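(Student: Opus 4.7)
The plan is to apply the estimate (\ref{goodbound}) from the proof of Theorem \ref{mainthm} to the family $(\mu^{-s}t_\mu)_{\mu\in(0,1]}$ rather than to a single distribution $t$, and to verify that every constant appearing on the right-hand side is uniform in $\mu$. Two preliminary observations make this possible. First, each member $\mu^{-s}t_\mu$ of the family satisfies the hypotheses of Theorem \ref{mainthm}: the cone $\Gamma=\bigcup_\lambda WF(t_\lambda)$ is scaling-invariant, hence bounds $WF(\mu^{-s}t_\mu)$; the soft landing condition on $\Gamma$ is likewise preserved; and the associated rescaled family $(\lambda^{-s}(\mu^{-s}t_\mu)_\lambda)_{\lambda\in(0,1]}=((\lambda\mu)^{-s}t_{\lambda\mu})_{\lambda\in(0,1]}$ is a subfamily of the hypothesis family $(\tau^{-s}t_\tau)_{\tau\in(0,1]}$, hence bounded in $\mathcal{D}^\prime_\Gamma(\mathcal{V}\setminus I)$ by a constant independent of $\mu$. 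Second, because $s+d>0$, the removable singularity theorem (Theorem \ref{removsing}) forces $\mu^{-s}\overline{t}_\mu$ --- which manifestly extends $\mu^{-s}t_\mu$ across $I$ and still lies in $E_s$ --- to coincide with the canonical extension of $\mu^{-s}t_\mu$ produced by Theorem \ref{mainthm}.

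These two facts reduce the proof to rerunning the four steps of the main theorem with the substitution $t\rightsquigarrow \mu^{-s}t_\mu$. Given a test point $p=(x_0,h_0;k_0,\xi_0)\notin\Gamma\cup C$: if $h_0\neq 0$, I would choose $\chi$ supported in $\mathcal{V}\setminus I$, so that $\mu^{-s}\overline{t}_\mu=\mu^{-s}t_\mu$ on $\supp\chi$ and boundedness of the seminorm $\Vert\cdot\Vert_{N,V,\chi}$ is immediate from the hypothesis. If $h_0=0$, then $k_0\neq 0$, and the geometric input used in the original proof --- namely the choice of $V=\{\vert k\vert\geqslant\delta'\vert\xi\vert\}$ and $\chi$ supported in $\{\vert h\vert\leqslant \varepsilon\}$ with $(\supp\chi\times V)\cap\Gamma=\emptyset$ --- depends only on the soft landing constant $\delta$ of $\Gamma$ and the chosen separation $\delta'$, not on $\mu$. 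The key uniform bound (\ref{goodbound}) then applies, with its right-hand side involving $\sup_{\lambda\in(0,\varepsilon]}(\lambda/\varepsilon)^{-s}$ times $\Vert(\mu^{-s}t_\mu)_{\lambda/\varepsilon}\Vert_{N,W,\varphi^\prime}$ and the corresponding $L^\infty$ norm of $\theta^{-m}\widehat{(\mu^{-s}t_\mu)_{\lambda/\varepsilon}}\varphi^\prime$, both of which are uniformly bounded in $\mu$ by the $\mathcal{D}^\prime_\Gamma$-boundedness of $(\tau^{-s}t_\tau)_{\tau\in(0,1]}$ established in the first observation.

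The one point that requires careful verification, and which I expect to be the main obstacle, is ensuring that the auxiliary seminorm $\pi_{2N}(\psi_{\varepsilon^{-1}}\chi_{\lambda/\varepsilon})$ appearing in the full form of (\ref{goodbound}) stays bounded as $\lambda\to 0$: this holds because the scaling compresses only the $h$-variable and, on the annulus $\{a\leqslant\vert h\vert\leqslant b\}$ where $\psi_{\varepsilon^{-1}}$ is supported, the function $\chi_{\lambda/\varepsilon}$ is identically $1$ as soon as $\lambda/\varepsilon$ is small enough, so all derivatives of the product are controlled uniformly. Combined with weak $\mathcal{D}^\prime$-boundedness of $(\mu^{-s}\overline{t}_\mu)_\mu$, which follows directly from $\overline{t}\in E_s$ (Theorem \ref{thm1}), this yields the desired boundedness of $(\mu^{-s}\overline{t}_\mu)_{\mu\in(0,1]}$ in $\mathcal{D}^\prime_{\Gamma\cup C}(\mathcal{V})$.
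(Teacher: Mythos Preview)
Your proposal is correct and follows essentially the same approach as the paper: both rest on the observation that the constants in estimate (\ref{goodbound}) are independent of $t$, and that the double-indexed family $\lambda^{-s}(\mu^{-s}t_\mu)_\lambda=(\lambda\mu)^{-s}t_{\lambda\mu}$ is bounded in $\mathcal{D}^\prime_\Gamma$ uniformly in $(\lambda,\mu)$. You are more careful than the paper on two points the paper leaves implicit --- the identification $\overline{t}_\mu=\overline{(t_\mu)}$ via Theorem \ref{removsing}, and the uniform control of $\pi_{2N}(\psi_{\varepsilon^{-1}}\chi_{\lambda/\varepsilon})$ --- which is entirely to your credit.
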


%\begin{prop}
% Now, if we are given only $\Gamma|_{1 \leqslant\vert h\vert\leqslant 3}$ then set the family:
%\begin{eqnarray}
%\Gamma_j=\Gamma|_{\{ 2^{-j}\leqslant\vert h\vert\leqslant 32^{-j} \}}=\{(x,2^{-j}h,k,2^j\xi)\vert (x,h,k,\xi)\in \Gamma|_{1 \leqslant\vert h\vert\leqslant 3}\}  
%\end{eqnarray}
%and the formula
%\begin{equation}
% \Gamma= \bigcup_{j=0}^\infty \Gamma_j 
%\end{equation}
%allows to reconstruct $\Gamma$ on $\{ 0 < \vert h\vert\leqslant 3 \}$.
%
% Finally, if $K\times V\cap \Gamma=\emptyset$ then set $K_{2^j}=\{x,2^{-j}h \vert (x,h)\in K\}$:
%
%$$\bigcup_{j=0}^\infty \left(K_{2^j}\setminus K_{2^{j+1}} \times V_j \right) \cap \Gamma=\emptyset $$
%\end{prop}
\subsection{The renormalized version of the main theorem.}
\paragraph{What do we need to extend the proof of the main theorem
to the case with counterterms ?}
In the course of the proof of \ref{mainthm}, we used that $\lambda^{-s}t_\lambda$ is bounded in $\mathcal{D}^\prime_\Gamma$. When $-m-1<s+d\leqslant m$, we need to introduce counterterms in the H\"ormander formula. 
We outline the proof 
of the renormalized case
following the main steps 
of the proof of Theorem \ref{mainthm}. 
We will sometimes denote by $\mathcal{F}\left[f\right]$, the Fourier transform $\widehat{f}$ of a Schwartz distribution $f$ and we denote by $e_{k,\xi}$ the Fourier character $e_{k,\xi}:(x,h)\mapsto e^{i(kx+\xi h)}$.
\begin{itemize}
\item The first step is identical, for 
$p=(x_0,0;k_0,\xi_0)\notin \overline{WF(t)}\cup C$, $k_0\neq 0$ 
we find a neighborhood $\text{supp }\chi\times V$ of $p$
such that $\text{supp }\chi\times V\cap \Gamma=\emptyset$
where $V\subset \{\vert k\vert\geqslant \delta^\prime \vert\xi\vert\}$
and $\text{supp }\chi\subset \{\vert h\vert\leqslant \varepsilon\}$ 
for some $\varepsilon,\delta^\prime>0$.
\item For the computational step,
we must use the Taylor formula
with integral 
remainder to take into account
the subtraction of counterterms:
$$ \mathcal{F}\left[\left(t\psi_{\lambda^{-1}}-\tau_\lambda\right)\chi\right](k,\xi)=\left\langle t \psi_{\lambda^{-1}},\underset{\text{subtraction of local counterterm}}{\underbrace{\left(1-\sum_{\vert\alpha\vert\leqslant m}\frac{h^\alpha}{\alpha !}(-\partial)^\alpha \delta_{h=0} \right)}} e_{k,\xi}\chi\right\rangle$$
$$=\left\langle t \psi_{\lambda^{-1}}, \underset{\text{Taylor remainder}}{\underbrace{\frac{1}{m!}\int_0^1 du(1-u)^{m}\left(\frac{\partial}{\partial u}\right)^{m+1} e_{k,u\xi}\chi_u}}\right\rangle $$
$$=\frac{1}{m!}\int_0^1 du(1-u)^{m}\left(\frac{\partial}{\partial u}\right)^{m+1}\widehat{t \psi_{\lambda^{-1}}\chi_u}(k,u\xi)$$
$$=\lambda^d\frac{1}{m!}\int_0^1 du(1-u)^{m}\left(\frac{\partial}{\partial u}\right)^{m+1}\widehat{t_\lambda \psi\chi_{\lambda u}}(k,u\lambda\xi) $$  
$$= \lambda^{d+m+1}\frac{1}{m!}\int_0^\lambda \frac{du}{\lambda}(1-\frac{u}{\lambda})^{m}\left(\frac{\partial}{\partial u}\right)^{m+1}\widehat{t_\lambda \psi\chi_{u}}(k,u\xi).$$ 
by variable change. 
We also 
introduce a 
rescaled version
of the previous
identity
with a 
variable parameter 
$\varepsilon>0$
in such a way that 
the cut-off function $\psi_{\varepsilon^{-1}}$
on the r.h.s.
restrict the expression
under the Fourier symbol
to the domain $\vert h\vert\leqslant\varepsilon$:
$$\forall\varepsilon>0,  \mathcal{F}\left[\left(t\psi_{\lambda^{-1}}-\tau_\lambda\right)\chi\right](k,\xi)$$ 
$$=\left(\frac{\lambda}{\varepsilon}\right)^{d+m}\frac{1}{m!}\int_0^{\frac{\lambda}{\varepsilon}}du(1-\frac{\varepsilon u}{\lambda})^{m}\left(\frac{\partial}{\partial u}\right)^{m+1}\mathcal{F}\left(t_{\frac{\lambda}{\varepsilon}} \psi_{\varepsilon^{-1}}\chi_{u}\right)(k,u\xi)  .$$
Since $\psi_{\varepsilon^{-1}}\subset \{\vert h\vert\leqslant \varepsilon\}$, 
we have the estimate
$$\partial_u^{m+1}\mathcal{F}\left(t_{\frac{\lambda}{\varepsilon}} \psi_{\varepsilon^{-1}}\chi_{u}\right)(k,u\xi)\leqslant (1+\varepsilon\vert\xi\vert)^{m+1}
\underset{0\leqslant j\leqslant m+1}{\sup}\left|\mathcal{F}(t_{\frac{\lambda}{\varepsilon}} \psi_{\varepsilon^{-1}}\partial_u^j\chi_{u})(k,u\xi)\right| ,$$
by Leibniz rule.

$$\vert (1+\vert k\vert+\vert\xi\vert)^N\left(\frac{\partial}{\partial u}\right)^{m+1}\mathcal{F}\left(t_{\frac{\lambda}{\varepsilon}} \psi_{\varepsilon^{-1}}\chi_{u}\right)(k,u\xi)  \vert$$
$$\leqslant (1+\vert k\vert+\vert\xi\vert)^{N+m+1}
\underset{0\leqslant j\leqslant m+1}{\sup}\left|\mathcal{F}(t_{\frac{\lambda}{\varepsilon}} \psi_{\varepsilon^{-1}}\partial_u^j\chi_{u})(k,u\xi)\right|$$
$$\leqslant\frac{(1+\vert k\vert+\vert\xi\vert)^{N+m+1}}{(1+\vert k\vert+u\vert\xi\vert)^{N+m+1}}(1+\vert k\vert+u\vert\xi\vert)^{N+m+1}
\underset{0\leqslant j\leqslant m+1}{\sup}\left|\mathcal{F}(t_{\frac{\lambda}{\varepsilon}} \psi_{\varepsilon^{-1}}\partial_u^j\chi_{u})(k,u\xi)\right|.$$
\item Following the proof of Theorem \ref{mainthm}, 
we find that the hypothesis (\ref{monequa}) $V\subset \{ \delta^\prime\vert \xi\vert\leqslant  \vert k\vert \}$ implies the estimate $$\sup_{(k,\xi)\in V}\frac{(1+\vert k\vert+\vert\xi\vert)^{N+m+1}}{(1+\vert k\vert+u\vert\xi\vert)^{N+m+1}}\leqslant  (1+\delta^{\prime -1})^{N+m+1} $$
from which we deduce:
$$\forall (k,\xi)\in V,\exists C,\, \vert (1+\vert k\vert+\vert\xi\vert)^N\left(\frac{\partial}{\partial u}\right)^{m+1}\mathcal{F}\left(t_{\frac{\lambda}{\varepsilon}} \psi_{\varepsilon^{-1}}\chi_{u}\right)(k,u\xi)   \vert$$
$$\leqslant C(1+\vert k\vert+u\vert\xi\vert)^{N+m+1}
\underset{0\leqslant j\leqslant m+1}{\sup}\left|\mathcal{F}(t_{\frac{\lambda}{\varepsilon}} \psi_{\varepsilon^{-1}}\partial_u^j\chi_{u})(k,u\xi)\right| .$$
\item Thus $\forall u\leqslant\frac{\lambda}{\varepsilon}$:
$$\Vert  \theta^N\left(\frac{\partial}{\partial u}\right)^{m+1}\mathcal{F}\left(t_{\frac{\lambda}{\varepsilon}} \psi_{\varepsilon^{-1}}\chi_{u}\right)(k,u\xi) \Vert_{L^\infty(V)}$$
$$\leqslant C\underset{0\leqslant j\leqslant m+1}{\sup} \Vert t_{\frac{\lambda}{\varepsilon}}\psi_{\varepsilon^{-1}}\Vert_{N+m+1,V_u,\partial_u^{j}\chi_{u}}$$
where $V_u=\{(k,u\xi) | (k,\xi)\in V \}$.
If we denote by $\chi^{(j)}_u=\partial_u^{j}\chi_{u}$,
by the same argument as in the proof
of Theorem \ref{mainthm}, for all $u\leqslant \frac{\lambda}{\varepsilon},\lambda\leqslant\varepsilon$,
we have the inclusion $\text{supp}\left(\psi_{\varepsilon^{-1}}\chi^{(j)}_{u}\right)\times V_{u}\subset \text{supp}\left(\psi_{\varepsilon^{-1}}\chi^{(j)}_{\frac{\lambda}{\varepsilon}}\right)\times V_{\frac{\lambda}{\varepsilon}}$
where 
$\text{supp}\left(\psi_{\varepsilon^{-1}}\chi^{(j)}_{\frac{\lambda}{\varepsilon}}\right)\times V\cap \Gamma=\emptyset ,$
which implies the estimate
$$\Vert t_{\frac{\lambda}{\varepsilon}}\psi_{\varepsilon^{-1}}\Vert_{N+m+1,V_u,\chi^{(j)}_{u}}\leqslant \Vert t_{\frac{\lambda}{\varepsilon}}\psi_{\varepsilon^{-1}}\chi^{(j)}_{u}\Vert_{N+m+1,V,\varphi^\prime} $$
where
$\varphi^\prime$ 
is any function in $\mathcal{D}(\mathbb{R}^{n+d})$
such that $\varphi^\prime=1$ on $\text{supp}\left(\psi_{\varepsilon^{-1}}\chi\right)$
and $\text{supp }\varphi^\prime\times V\cap \Gamma=\emptyset$.
Finally, we find that 
$$\Vert\left(t\psi_{\lambda^{-1}}-\tau_\lambda\right)\Vert_{N,V,\chi}$$ $$\leqslant C\left(\frac{\lambda}{\varepsilon}\right)^{d+m}\frac{1}{m!}\int_0^{\frac{\lambda}{\varepsilon}}du(1-\frac{\varepsilon u}{\lambda})^{m} \underset{u\in(0,1],0\leqslant j\leqslant m+1}{\sup}\Vert t_{\frac{\lambda}{\varepsilon}}\psi_{\varepsilon^{-1}}\chi^{(j)}_{u}\Vert_{N+m+1,V,\varphi^\prime}$$ 
$$\leqslant C \left(\frac{\lambda}{\varepsilon}\right)^{d+m+1}\frac{1}{m+1!}\underset{u\in(0,1],0\leqslant j\leqslant m+1}{\sup}\Vert t_{\frac{\lambda}{\varepsilon}}\psi_{\varepsilon^{-1}}\chi^{(j)}_{u}\Vert_{N+m+1,V,\varphi^\prime}$$
where we use the simple identity
$\frac{1}{m+1}=\int_0^1 du(1-u)^{m}$.
Then we use the estimates 
(\ref{corollaire en or wanted by Christian}) 
for the product of the bounded family of smooth functions $\psi_{\varepsilon^{-1}}\chi^{(j)}_{u}$ and the family of distributions $t_{\frac{\lambda}{\varepsilon}}$ 
and the assumption that $\lambda^{-s}t_\lambda$
is bounded in $\mathcal{D}^\prime_\Gamma$
to establish the estimate 
$$\sup_{u\leqslant 1}\Vert t_{\frac{\lambda}{\varepsilon}}\psi_{\varepsilon^{-1}}\chi^{(j)}_{u}\Vert_{N+m+1,V,\varphi^\prime}\leqslant C^\prime \left(\frac{\lambda}{\varepsilon}\right)^s$$ 
for all $0\leqslant j\leqslant m+1$. 
Then we can conclude in 
the same way as in the proof
of Theorem \ref{mainthm}:
$$\Vert\int_0^1\frac{d\lambda}{\lambda} \left(t\psi_{\lambda^{-1}}-\tau_\lambda\right)\Vert_{N,V,\chi}$$ 
$$\leqslant\Vert \underset{\in \mathcal{D}^\prime_{WF(t)}}{\underbrace{t(\chi-\chi_{\varepsilon^{-1}})}} \Vert_{N,V,\chi}
+ \Vert \underset{\in \mathcal{D}^\prime_C}{\underbrace{\int_\varepsilon^1 \tau_\lambda}}\Vert_{N,V,\chi}
+\int_0^\varepsilon\frac{d\lambda}{\lambda} \underset{\text{integrable}}{\left(\frac{\lambda}{\varepsilon}\right)^{s+d+m+1}}\frac{C}{m+1!}C^\prime,$$
where the last term is finite.
\end{itemize}
\begin{thm}\label{mainthm22}
Theorem \ref{mainthm} holds under the weaker assumption $s\in\mathbb{R}$.
Moreover
if $-s-d\in \mathbb{N}$ then
$\lambda^{-s^\prime}\overline{t}_\lambda$ is bounded in $\mathcal{D}^\prime_{\Gamma\cup C}(\mathcal{V})$ for all $s^\prime<s$,
if $-s-d\notin \mathbb{N}$
then $\lambda^{-s}\overline{t}_\lambda$ is bounded in $\mathcal{D}^\prime_{\Gamma\cup C}(\mathcal{V})$. 
\end{thm}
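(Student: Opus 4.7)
The plan is to follow the same architecture as the proof of Theorem \ref{mainthm}, keeping the geometric setup and the seminorm estimates essentially unchanged, but replacing the bare H\"ormander integral $\int_0^1\tfrac{d\lambda}{\lambda}\,t\psi_{\lambda^{-1}}$ by its renormalized counterpart $\int_0^1\tfrac{d\lambda}{\lambda}\,(t\psi_{\lambda^{-1}}-\tau_\lambda)$. First I would reduce to the flat model $\mathbb{R}^{n+d}$ with $\rho=h^j\partial_{h^j}$, fix a point $p=(x_0,0;k_0,\xi_0)\notin\overline{WF(t)}\cup C$ (so $k_0\neq 0$), and choose $\chi$, a closed cone $V\ni(k_0,\xi_0)$ with $V\subset\{|k|\geqslant\delta'|\xi|\}$, and $\text{supp }\chi\subset\{|h|\leqslant\varepsilon\}$ exactly as in the first step of Theorem \ref{mainthm}, so that $(\text{supp }\chi\times V)\cap\Gamma=\emptyset$.

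Second, I would replace the direct Fourier identity by its Taylor-remainder version: writing $e_{k,\xi}(x,h)=e^{i(kx+\xi h)}$ and using the definition of $\tau_\lambda$ through the subtraction of the Taylor polynomial of $\varphi$ along $I$, I obtain after a change of variable
\begin{equation*}
\mathcal{F}[(t\psi_{\lambda^{-1}}-\tau_\lambda)\chi](k,\xi)=\Bigl(\tfrac{\lambda}{\varepsilon}\Bigr)^{d+m}\tfrac{1}{m!}\int_0^{\lambda/\varepsilon}\!\!du\,(1-\tfrac{\varepsilon u}{\lambda})^m\,\partial_u^{m+1}\mathcal{F}(t_{\lambda/\varepsilon}\psi_{\varepsilon^{-1}}\chi_u)(k,u\xi).
\end{equation*}
Leibniz applied to $\partial_u^{m+1}$ produces factors $(1+\varepsilon|\xi|)^{m+1}$ and derivatives $\chi_u^{(j)}=\partial_u^j\chi_u$ for $0\leqslant j\leqslant m+1$. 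The crucial dilation lemma of the previous proof, namely $\sup_{(k,\xi)\in V}(1+|k|+|\xi|)^{N+m+1}/(1+|k|+u|\xi|)^{N+m+1}\leqslant(1+\delta'{}^{-1})^{N+m+1}$, applies verbatim because $V$ is unchanged; this absorbs the $u$-dependence in the weight. After these manipulations I am left with estimating the seminorm $\Vert t_{\lambda/\varepsilon}\psi_{\varepsilon^{-1}}\chi_u^{(j)}\Vert_{N+m+1,V,\varphi'}$ for a single cutoff $\varphi'\in\mathcal{D}(\mathbb{R}^{n+d})$ supported in a neighbourhood of $\text{supp}(\psi_{\varepsilon^{-1}}\chi)$ with $(\text{supp }\varphi'\times V)\cap\Gamma=\emptyset$.

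Third, the product estimate \ref{corollaire en or wanted by Christian} together with the hypothesis that $\lambda^{-s}t_\lambda$ is bounded in $\mathcal{D}^\prime_\Gamma$ yields uniformly in $u$ and in $0\leqslant j\leqslant m+1$ the bound $\Vert t_{\lambda/\varepsilon}\psi_{\varepsilon^{-1}}\chi_u^{(j)}\Vert_{N+m+1,V,\varphi'}\leqslant C'(\lambda/\varepsilon)^s$. Combining with the prefactor $(\lambda/\varepsilon)^{d+m+1}$ coming from Taylor and from the $1/(m+1)$ arising from $\int_0^1(1-u)^m du$, the integrand is $\leqslant C''(\lambda/\varepsilon)^{s+d+m+1}$, which is integrable at $0$ precisely in the range $-m-1<s+d\leqslant -m$. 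Splitting the integral $\int_0^1=\int_0^\varepsilon+\int_\varepsilon^1$, the second piece is $t(\chi-\chi_{\varepsilon^{-1}})$ minus $\int_\varepsilon^1\tau_\lambda$, which lives respectively in $\mathcal{D}^\prime_{WF(t)}$ (since $\chi-\chi_{\varepsilon^{-1}}$ is supported away from $I$) and in $\mathcal{D}^\prime_C$ by Theorem \ref{Grosthmconormcontretermes}, and both are controlled in $\Vert\cdot\Vert_{N,V,\chi}$ since $V\cap\bigl(WF(t)\cup C\bigr)_{|{\rm supp}\,\chi}=\emptyset$. This proves $WF(\overline t)\subset\overline{WF(t)}\cup C$ in full generality $s\in\mathbb{R}$.

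Fourth, for the ``moreover'' part I would, exactly as in the unrenormalized case, observe that all constants in the estimate above are independent of $t$, so the argument carries over to the family $(\mu^{-s}t_\mu)_{\mu\in(0,1]}$, whose rescaling $\lambda^{-s}(\mu^{-s}t_\mu)_\lambda=(\lambda\mu)^{-s}t_{\lambda\mu}$ is bounded in $\mathcal{D}^\prime_\Gamma(\mathcal{V}\setminus I)$ uniformly in $(\lambda,\mu)$. Applying the renormalized estimate to each $\mu^{-s}t_\mu$ and tracking the homogeneity of the counterterms $\tau^\lambda$ exactly as in Propositions \ref{scal2} and \ref{scal3noninteger} gives boundedness of $\mu^{-s}\overline t_\mu$ in $\mathcal{D}^\prime_{\Gamma\cup C}(\mathcal{V})$ when $-s-d\notin\mathbb{N}$, and only of $\mu^{-s'}\overline t_\mu$ for $s'<s$ when $-s-d\in\mathbb{N}$, the loss being precisely the logarithmic divergence in $\mu$ already encountered in Proposition \ref{scal3noninteger}. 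The main obstacle will be to carry out the bookkeeping of the Taylor remainder under the combined rescalings by $\lambda,\varepsilon,\mu$ cleanly, making sure that the substitute cutoff $\varphi'$ can be chosen independently of $\lambda$ while still being disjoint from $\Gamma$, and verifying that the logarithmic blow-up in the borderline integer case does not spoil the microlocal seminorm control on $V$.
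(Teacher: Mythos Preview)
Your proposal is correct and follows essentially the same approach as the paper: the same reduction to the flat model, the same Taylor-remainder Fourier identity with the rescaling by $\varepsilon$, the same dilation estimate on $V\subset\{|k|\geqslant\delta'|\xi|\}$, the same use of the auxiliary cutoff $\varphi'$ and the product estimate \ref{corollaire en or wanted by Christian}, and the same splitting $\int_0^1=\int_0^\varepsilon+\int_\varepsilon^1$ with the counterterm piece lying in $\mathcal{D}^\prime_C$. The ``moreover'' part is likewise handled in the paper by the observation that the constants are independent of $t$, together with the distributional scaling analysis of Propositions \ref{scal2} and \ref{scal3noninteger}, exactly as you outline.
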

\section{Appendix}
\subsection{Estimates for the product of a distribution and a smooth function.}
\begin{thm}\label{producttheoremeskin}
Let $m\in\mathbb{N}$ 
and $\Gamma\subset T^\bullet(\mathbb{R}^d)$.  
Let $V$ be a closed cone in $\mathbb{R}^d\setminus {0}$ 
and $\chi\in \mathcal{D}(\mathbb{R}^d)$. 
Then for every $N$ and every closed conical neighborhood $W$ of $V$ 
such that $\left(\text{supp }\chi\times W\right)\cap \Gamma=\emptyset$, 
there exists a constant $C$ such that for all $\varphi \in \mathcal{D}(\mathbb{R}^d)$
and
for all $t\in \mathcal{D}^\prime_\Gamma(\mathbb{R}^d)$ such that $\Vert \theta^{-m} \widehat{t\chi} \Vert_{L^\infty}<+\infty$:
\begin{equation}\label{corollaire en or wanted by Christian}
\Vert t \varphi\Vert_{N,V,\chi} \leqslant C\pi_{2N,K}(\varphi) (\Vert t\Vert_{N,W,\chi}+\Vert \theta^{-m} \widehat{t\chi} \Vert_{L^\infty}).
\end{equation}
\end{thm}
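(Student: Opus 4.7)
The plan is to expand the Fourier transform of $(t\varphi)\chi$ as a convolution and then split the integral according to whether the convolution variable lies in the cone $W$ or not. First, I observe that since multiplication of distributions by smooth functions is commutative, $(t\varphi)\chi=(t\chi)\varphi$, so
\[
\widehat{(t\varphi)\chi}(\xi)=(2\pi)^{-d}\bigl(\widehat{t\chi}*\widehat{\varphi}\bigr)(\xi)=(2\pi)^{-d}\int_{\mathbb{R}^d}\widehat{t\chi}(\eta)\,\widehat{\varphi}(\xi-\eta)\,d\eta.
\]
For $\xi\in V$, I would decompose this integral as $I_1(\xi)+I_2(\xi)$ where the integration runs over $\eta\in W$ and $\eta\notin W$ respectively, and estimate each piece separately after multiplication by $(1+|\xi|)^N$.

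For $I_1$ with $\eta\in W$, the hypothesis $(\text{supp}\,\chi\times W)\cap\Gamma=\emptyset$ gives $|\widehat{t\chi}(\eta)|\leqslant (1+|\eta|)^{-N}\Vert t\Vert_{N,W,\chi}$, so combined with Peetre's inequality $(1+|\xi|)^N\leqslant 2^N(1+|\eta|)^N(1+|\xi-\eta|)^N$ and the standard Schwartz bound $|\widehat{\varphi}(\zeta)|\leqslant C_d|K|\,(1+|\zeta|)^{-M}\pi_{M,K}(\varphi)$ (obtained from integration by parts applied to $\zeta^\alpha\widehat{\varphi}$) with $M=N+d+1$, the factors of $(1+|\eta|)^N$ cancel and I get an integrable decay $(1+|\xi-\eta|)^{-d-1}$ in $\eta$, yielding a bound $C\pi_{N+d+1,K}(\varphi)\Vert t\Vert_{N,W,\chi}$.

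For $I_2$ with $\eta\notin W$, I use the polynomial bound $|\widehat{t\chi}(\eta)|\leqslant(1+|\eta|)^m\Vert\theta^{-m}\widehat{t\chi}\Vert_{L^\infty}$. The key geometric input is the following: since $W$ is a closed conic neighborhood of the closed cone $V$, the angular distance between $V$ and $\mathbb{R}^d\setminus W$ is bounded below by some $\theta_0>0$, and so there exists $c>0$ such that
\[
|\xi-\eta|\geqslant c\,(|\xi|+|\eta|)\qquad\forall\,\xi\in V,\;\eta\notin W.
\]
Combined with fast decay $|\widehat{\varphi}(\xi-\eta)|\leqslant C_d|K|(1+|\xi-\eta|)^{-M'}\pi_{M',K}(\varphi)$ for $M'=N+m+d+1$, I split $(1+|\xi-\eta|)^{-M'}\lesssim (1+|\xi|)^{-N}(1+|\eta|)^{-m-d-1}$, so the $(1+|\xi|)^N$ prefactor is absorbed and the $\eta$-integral converges. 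This yields a bound $C\pi_{N+m+d+1,K}(\varphi)\Vert\theta^{-m}\widehat{t\chi}\Vert_{L^\infty}$.

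Taking the maximum of the two derivative orders $N+d+1$ and $N+m+d+1$ (bounded by $2N$ for $N$ large enough, or by enlarging $N$ and using monotonicity $\Vert\cdot\Vert_{N_0,V,\chi}\leqslant\Vert\cdot\Vert_{N,V,\chi}$ for $N_0\leqslant N$ to reduce the small-$N$ case to the large-$N$ case) produces the announced inequality. I expect the main obstacle to be the geometric estimate $|\xi-\eta|\geqslant c(|\xi|+|\eta|)$ on the complement of a conic neighborhood, and verifying carefully that the constant $c$ depends only on the angular separation between $V$ and $\partial W$, so that $C$ can be chosen uniformly in $t$ and $\varphi$.
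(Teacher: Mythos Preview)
Your proof is correct and follows essentially the same approach as the paper: write $\widehat{(t\varphi)\chi}=\widehat{t\chi}*\widehat{\varphi}$, split the convolution into a piece where $\widehat{t\chi}$ has rapid decay (controlled by $\Vert t\Vert_{N,W,\chi}$) and a piece where angular separation between $\xi\in V$ and $\eta$ gives $|\xi-\eta|\gtrsim|\xi|+|\eta|$, then exploit the fast decay of $\widehat{\varphi}$. The only cosmetic differences are that the paper splits by the angular condition $|\xi/|\xi|-\eta/|\eta||\lessgtr\delta$ rather than by $\eta\in W$ versus $\eta\notin W$, and uses $2N$ derivatives of $\varphi$ uniformly in both pieces instead of your sharper counts $N+d+1$ and $N+m+d+1$.
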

\begin{proof}
We denote by $\theta$ the weight function
$\xi\mapsto (1+\vert\xi\vert)$ and $e_\xi:=x\mapsto e^{-ix.\xi}$ the Fourier character.
If the cone $V$ is given, we can always define a thickening $W$ of the cone $V$ such that $W$ is a closed conic neighborhood of $V$:
$$W=\{\eta\in\mathbb{R}^d\setminus \{0\} |\exists\xi\in V, \vert\frac{\xi}{\vert \xi\vert}- \frac{\eta}{\vert \eta\vert}\vert \leqslant \delta \},$$
intuitively this means that 
small angular perturbations of covectors in $V$ 
will lie in the neighborhood $W$. 
If 
$\left(\text{supp }\chi\times V\right)\cap \Gamma=\emptyset$ 
then $\delta$ 
can be chosen $\textbf{arbitrarily small}$ 
in such a way that
$\left(\text{supp }\chi\times W\right)\cap \Gamma=\emptyset$.  
We compute the Fourier transform
of the product:
$$\vert\widehat{t\varphi\chi}(\xi)\vert=
\vert \left\langle t\varphi , e_\xi \chi  \right\rangle\vert=\vert \widehat{t\chi} \star \widehat{\varphi} \vert(\xi) $$
$$\leqslant \int_{\mathbb{R}^d} \vert \widehat{\varphi}(\xi-\eta)    \widehat{t\chi} (\eta) \vert d\eta .$$
We reduce to the estimate $$ \int_{\mathbb{R}^d} \vert \widehat{\varphi}(\xi-\eta)    \widehat{t\chi} (\eta) \vert d\eta $$
$$\leqslant \underset{I_1(\xi)}{\underbrace{\int_{ \vert\frac{\xi}{\vert \xi\vert}- \frac{\eta}{\vert \eta\vert}\vert \leqslant \delta  } \vert \widehat{\varphi}(\xi-\eta)    \widehat{t\chi} (\eta) \vert d\eta}} + \underset{I_2(\xi)}{\underbrace{\int_{ \vert\frac{\xi}{\vert \xi\vert}- \frac{\eta}{\vert \eta\vert}\vert \geqslant \delta  }\vert \widehat{\varphi}(\xi-\eta)    \widehat{t\chi} (\eta) \vert d\eta}},$$
we will estimate separately the two terms $I_1(\xi),I_2(\xi)$. 
Start with $I_1(\xi)$, if $\xi\in V$ then $\vert\frac{\xi}{\vert \xi\vert}- \frac{\eta}{\vert \eta\vert}\vert \leqslant \delta \implies \eta\in W$ and by definition of the seminorms, we have the estimate
$$\forall N, \vert \widehat{t\chi}(\eta)\vert\leqslant \Vert t\Vert_{N,W,\chi} (1+\vert \eta\vert)^{-N} $$
then we use a trick due to Eskin, 
since $\varphi \in \mathcal{D}(\mathbb{R}^d)$, 
we also have 
$\vert\widehat{\varphi} (\xi-\eta) \vert\leqslant \Vert \theta^{2N}\widehat{\varphi}\Vert_{L^\infty}(1+\vert \xi-\eta\vert)^{-2N}\leqslant C\pi_{2N}(\varphi)(1+\vert \xi-\eta\vert)^{-2N}$ where $C=d^N\text{Vol (supp $\varphi$)}$ depends on $N$ and on the volume
of $\text{supp }\varphi$.
Hence
$$ \int_{ \vert\frac{\xi}{\vert \xi\vert}- \frac{\eta}{\vert \eta\vert}\vert \leqslant \delta  } \vert \widehat{\varphi}(\xi-\eta)    \widehat{t\chi} (\eta) \vert d\eta$$ $$\leqslant C\pi_{2N}(\varphi)\Vert t\Vert_{N,W,\chi}  (1+\vert \xi\vert)^{-N}  \int_{\mathbb{R}^d} \frac{(1+\vert \xi\vert)^{N}}{(1+\vert \eta\vert)^{N}  (1+\vert\xi-\eta\vert)^{2N}}d\eta $$
$$\leqslant C\pi_{2N}(\varphi) \Vert t\Vert_{N,W,\chi}  (1+\vert \xi\vert)^{-N} C_1$$ where $C_1=\sup_{\vert \xi\vert} \int_{\mathbb{R}^d} \frac{(1+\vert \xi\vert)^{N}}{(1+\vert \eta\vert)^{N}  (1+\vert\xi-\eta\vert)^{2N}}d\eta $ is finite
when $N\geqslant d+1$.
To estimate the second term $I_2(\xi)$, we use the inequality  $\vert\frac{\xi}{\vert \xi\vert}- \frac{\eta}{\vert \eta\vert}\vert \geqslant \delta$ which implies the angle beetween covectors is bounded below by an angle $\alpha=2\arcsin\frac{\delta}{2}>0$. 
By definition $\frac{\eta}{\vert \eta\vert} $ is in $\mathbb{R}^d\setminus \left(W\cup \{0\}\right)$, 
and $\frac{\xi}{\vert \xi\vert}\in V\subset W$ hence the angle between
$\frac{\xi}{\vert\xi\vert},\frac{\eta}{\vert\eta\vert}$
must be larger than $\alpha=2\arcsin\frac{\delta}{2}$. 
Then the trick
is to deduce lower bounds from the identity 
$a^2+b^2-2ab\cos c=(a-b\cos c)^2+b^2\sin^2c =(b-a\cos c)^2+a^2\sin^2c$, thus 
$$\forall (\xi,\eta)\in \left(V\times W^c\right),\vert( \sin\alpha) \eta\vert  \leqslant \vert \xi-\eta\vert, \vert(\sin\alpha) \xi\vert  \leqslant \vert \xi-\eta\vert .$$
We start again from the estimate on the Fourier tranform of $\varphi$, $\forall N$:
$$\vert\widehat{\varphi} (\xi-\eta) \vert\leqslant C\pi_{2N}(\varphi)(1+\vert \xi-\eta\vert)^{-2N} \leqslant C\pi_{2N}(\varphi)(1+\vert(\sin\alpha) \eta\vert )^{-N} (1+ \vert(\sin\alpha) \xi\vert)^{-N}$$
$$\quad \leqslant C\pi_{2N}(\varphi)\vert\sin\alpha\vert^{-2N}(1+\vert \eta\vert )^{-N} (1+ \vert \xi\vert)^{-N} $$
$$\int_{ \vert\frac{\xi}{\vert \xi\vert}- \frac{\eta}{\vert \eta\vert}\vert \geqslant \delta  } \vert \widehat{\varphi}(\xi-\eta)    \widehat{t\chi} (\eta) \vert d\eta$$
$$\leqslant C\pi_{2N}(\varphi)\vert\sin\alpha\vert^{-2N}(1+ \vert\xi\vert)^{-N} \int_{\mathbb{R}^d}(1+\vert\eta\vert )^{-N} \vert\widehat{t\chi} (\eta) \vert d\eta$$ $$\leqslant C\pi_{2N}(\varphi)\vert\sin\alpha\vert^{-2N}(1+ \vert\xi\vert)^{-N} \int_{\mathbb{R}^d}(1+\vert\eta\vert )^{-N} \Vert \theta^{-m} \widehat{t\chi} \Vert_{L^\infty}(1+\vert \eta \vert)^m d\eta $$
where $m$ is the order of the distribution, 
%and $\Vert (1+\vert\xi\vert)^{-m} \widehat{t\chi} \Vert_{L^\infty}$ is defined as the seminorm in the space of distributions:
%$$\Vert (1+\vert\xi\vert)^{-m} \widehat{t\chi} \Vert_{L^\infty}=\inf \{C | \forall\varphi\in C_c^\infty(K) , \vert\left\langle t,\varphi \right\rangle\vert \leqslant C\pi_m(\varphi)   \} $$
%$$\forall\varphi\in C_c^\infty(K), \vert\left\langle t,\varphi \right\rangle\vert \leqslant \Vert (1+\vert\xi\vert)^{-m} \widehat{t\chi} \Vert_{L^\infty}\pi_m(\varphi) $$
finally
$$I_2(\xi)\leqslant C_2\pi_{2N}(\varphi) (1+ \vert \xi\vert)^{-N} \Vert \theta^{-m} \widehat{t\chi} \Vert_{L^\infty}$$
where $C_2=C\vert\sin\alpha\vert^{-2N}
\int_{\mathbb{R}^d}(1+\vert\eta\vert )^{-N} (1+\vert \eta \vert)^md\eta$
is finite when 
$N\geqslant m+d+1$.
Gathering the two estimates, we have
$$ \int_{\mathbb{R}^d} \vert \widehat{\varphi}(\xi-\eta)    \widehat{t\chi} (\eta) \vert d\eta $$
$$\leqslant C\pi_{2N}(\varphi) (1+\vert \xi\vert)^{-N} \left(C_1\Vert t\Vert_{N,W,\chi}+C_2\Vert \theta^{-m} \widehat{t\chi} \Vert_{L^\infty}\right) $$ but recall the estimate on the right hand side is relevant 
provided $\delta>0$ which implies $\alpha>0$, 
$\delta$ depends 
on the choice of the cone $W$, 
the estimate is true for any cone $W$ 
such that $\text{dist }(W^c\cap\mathbb{S}^{d-1},V\cap \mathbb{S}^{d-1}) \geqslant \delta$.
%$$ \int_{\mathbb{R}^d} \vert \widehat{\varphi}(\xi-\eta)    \widehat{t\chi} (\eta) \vert d\eta $$
%$$\leqslant C\pi_{2N}(\varphi) (1+\vert \xi\vert)^{-N} \left(\Vert t\Vert_{N,W,\chi}   C_1 +  \Vert (1+\vert\xi\vert)^{-m} \widehat{t\chi} \Vert_{L^\infty}\pi_m(\chi) C_2(\delta) \right) $$
We have a final estimate 
$$\Vert t \varphi\Vert_{N,V,\chi} \leqslant C\pi_{2N}(\varphi)(\Vert t\Vert_{N,W,\chi}+\Vert \theta^{-m} \widehat{t\chi} \Vert_{L^\infty})$$
where $C$ is a constant which depends
on $N,V,W$ and the volume of $\text{supp }\varphi$.
\end{proof}

\chapter{Stability of the microlocal extension.}
\paragraph{Introduction.}
In Chapter $3$, 
we saw that there is 
a subspace of distributions of $\mathcal{D}^\prime(M\setminus I)$ 
for which we could 
control 
the wave front set 
of the extension 
$\overline{t}\in \mathcal{D}^\prime(M)$. 
In fact, 
we proved that if  
$WF(t)$ satisfies the soft landing condition 
and $\lambda^{-s}t_\lambda $ is bounded in $ \mathcal{D}^\prime_\Gamma$,
then $WF(\overline{t})\subset \overline{WF(t)}\bigcup C$.
Our assumptions obviously
depend on the choice of
some
Euler vector field
$\rho$.
Actually, our objective 
in this technical part 
is to investigate 
the dependence
of these conditions 
on the choice of 
$\rho$, 
their stability
when we pull-back by 
diffeomorphisms 
and when we multiply 
distributions
both satisfying
these hypotheses. 
This is absolutely necessary 
in order to prove by recursion that 
all 
vacuum expectation values 
$\left\langle 0\vert T(a_1(x_1)...a_n(x_n))  \vert 0 \right\rangle$ 
are well defined in the distributional sense.   
%\begin{thm}
%Let $\Gamma_1,\Gamma_2$ be two cones such that $\Gamma_1\cap-\Gamma_2=\emptyset$.
%Let $(f_{\mu})_{\mu},(g_{\nu})_{\nu}$ be two bounded families in $D^\prime_{\Gamma_1},D^\prime_{\Gamma_2}$ respectively. Then the family $(f_\mu g_\nu)_{(\mu,\nu)}$ is well defined in the sense of distributions and the family
%is bounded in $D^\prime_{\Gamma_1\cup\Gamma_2\cup \Gamma_1+\Gamma_2}$.
%\end{thm}
%
%
%
%\begin{thm}
%Let $\lambda\in[0,1]\mapsto \left(\Phi_{\lambda}(.):U\mapsto U\right)$ a family of diffeomorphisms of an open set $U$ which depends smoothly on a parameter $\lambda\in[0,1]$.
%$(t_\mu)_\mu$ be a bounded family in $D^\prime_\Gamma$, $\Gamma\subset T^\bullet U$.
%Then there exists a nested family of cones $\tilde{\Gamma}_\varepsilon$
%\begin{eqnarray}
%\varepsilon_1\leqslant\varepsilon_2\implies \tilde{\Gamma}_{\varepsilon_1}\subset \tilde{\Gamma}_{\varepsilon_2}
%\\ \bigcap_{\varepsilon} \tilde{\Gamma}_{\varepsilon}=\Gamma
%\end{eqnarray}
%such that $\forall \varepsilon >0$ the family $\left(\Phi_\lambda^*t_\mu\right)_{\lambda\in[0,\varepsilon],\mu}$
%is bounded in $D^\prime_{\tilde{\Gamma}_{\varepsilon}}$.
%\end{thm}
\section{Notation, definitions.}
We denote by $\theta$
the weight function $\xi\mapsto (1+\vert\xi\vert)$. 
We recall a theorem
of Laurent Schwartz
(see \cite{Schwartz} p.~86 Theorem (22))
which gives a concrete
representation
of bounded families
of distributions.
\begin{thm}\label{LaurentSchwartzbounded}
For a subset $B\subset \mathcal{D}^\prime(\mathbb{R}^d)$
to be bounded it is neccessary and sufficient 
that
for any domain $\Omega$ with compact closure,
there is a multi-index $\alpha$
such that
$\forall t\in B,\exists f_t\in C^0(\Omega)$ where
$t|_\Omega=\partial^\alpha f_t$
and $\sup_{t\in B}\Vert f_t\Vert_{L^\infty(\Omega)}<\infty$.
\end{thm}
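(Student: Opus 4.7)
The plan is to prove both implications separately, the ``sufficient'' direction being essentially trivial and the ``necessary'' direction being the substantive content.

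For the sufficient direction, suppose that on each bounded open $\Omega$ there exist $\alpha$ and a family $(f_t)_{t\in B}$ in $C^0(\Omega)$ with $t|_\Omega=\partial^\alpha f_t$ and $M_\Omega:=\sup_{t\in B}\Vert f_t\Vert_{L^\infty(\Omega)}<\infty$. Then for any $\varphi\in\mathcal{D}(\Omega)$ and any $t\in B$, integration by parts yields
\begin{equation*}
|\langle t,\varphi\rangle|=\Bigl|\int_\Omega f_t\,(-1)^{|\alpha|}\partial^\alpha\varphi\,dx\Bigr|\leqslant M_\Omega\,\vol(\supp\varphi)\,\pi_{|\alpha|}(\varphi),
\end{equation*}
which shows that $\sup_{t\in B}|\langle t,\varphi\rangle|<\infty$, i.e.\ $B$ is weakly bounded.

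For the necessary direction, I would first apply the Banach--Steinhaus theorem in the form proved in the appendix of Chapter~$1$: if $B$ is weakly bounded in $\mathcal{D}^\prime(\mathbb{R}^d)$, then for every compact $K\subset\mathbb{R}^d$ there exist $k=k(K)\in\mathbb{N}$ and $C_K>0$ such that
\begin{equation*}
\forall t\in B,\ \forall\varphi\in\mathcal{D}_K(\mathbb{R}^d),\quad |\langle t,\varphi\rangle|\leqslant C_K\pi_k(\varphi).
\end{equation*}
Thus the family $B$ has \emph{uniform} finite order on compact sets. The key point is that the multi-index $\alpha$ in the conclusion can then be chosen independently of $t\in B$, depending only on $\Omega$ (more precisely on $k$ and on $d$).

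The second step is an explicit primitive construction. Given a bounded open set $\Omega$, choose a compact $K$ with $\Omega\Subset\mathring K$, fix the integer $k=k(K)$ furnished by Banach--Steinhaus, and a cut-off $\chi\in\mathcal{D}(\mathring K)$ with $\chi\equiv 1$ on a neighbourhood of $\overline\Omega$. Set $\alpha=(k+1,\dots,k+1)$. For $x\in\Omega$ define
\begin{equation*}
f_t(x):=\Bigl\langle t_y,\,\chi(y)\prod_{j=1}^{d}\frac{(x_j-y_j)_+^{k}}{k!}\Bigr\rangle.
\end{equation*}
A direct computation, using that the tensor-product kernel $\prod_j(x_j-y_j)_+^{k}/k!$ is the iterated primitive of $\prod_j\delta(x_j-y_j)$, shows that $\partial^\alpha f_t=t$ on $\Omega$. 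Continuity of $f_t$ follows from the $C^k$-dependence of the test function in the bracket on $x$. Finally, the uniform Banach--Steinhaus estimate applied with $\varphi=\chi(y)\prod_j(x_j-y_j)_+^{k}/k!$ yields a bound
\begin{equation*}
|f_t(x)|\leqslant C_K\pi_k\Bigl(\chi(\cdot)\prod_{j}\frac{(x_j-\cdot_j)_+^{k}}{k!}\Bigr)\leqslant C_{K,\Omega},
\end{equation*}
uniformly in $t\in B$ and $x\in\Omega$, which produces the required uniform $L^\infty$-bound. The main obstacle is the technical verification that the kernel produces the correct differentiation identity $\partial^\alpha f_t=t|_\Omega$ on $\Omega$ and that the seminorm $\pi_k$ of the kernel in the $y$-variable is bounded uniformly in $x\in\Omega$; both follow from the compact support of $\chi$ and the fact that only $y$-derivatives of order $\leqslant k$ enter the estimate.
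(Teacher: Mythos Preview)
The paper does not actually prove this theorem: it is stated as a recalled result of Laurent Schwartz with a citation to \cite{Schwartz} p.~86, Theorem~(22), and no proof is given in the text. So there is no ``paper's own proof'' to compare against, and your attempt must be judged on its own.

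Your overall strategy is the standard one and is sound: the sufficient direction is immediate, and for the necessary direction you correctly invoke the uniform Banach--Steinhaus bound from the appendix of Chapter~1 to obtain a common order $k$ on any compact $K$, then build primitives via the one-sided kernel. There is, however, an off-by-one regularity mismatch in your construction. The function
\[
y\longmapsto \chi(y)\prod_{j=1}^d\frac{(x_j-y_j)_+^{k}}{k!}
\]
is only $C^{k-1}$ in $y$: its $k$-th partial derivative in any $y_j$ produces a Heaviside factor $H(x_j-y_j)$, which is discontinuous. Consequently the pairing $\langle t_y,\chi(y)\prod_j(x_j-y_j)_+^k/k!\rangle$ with a distribution of order $k$ is not a priori defined, and the asserted ``$C^k$-dependence of the test function in the bracket on $x$'' fails for the same reason. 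The fix is painless: replace the exponent $k$ by $k+1$ (so the kernel becomes $C^k$ in $y$ and depends continuously on $x$ in the $C^k$ topology) and take $\alpha=(k+2,\dots,k+2)$. With that adjustment your argument goes through: the distributional identity $\partial^\alpha f_t=t$ on $\Omega$ follows by testing against $\psi\in\mathcal{D}(\Omega)$ and using $\partial_x^{\alpha}\bigl[\prod_j(x_j-y_j)_+^{k+1}/(k+1)!\bigr]=\delta(x-y)$, and the uniform $L^\infty$ bound on $f_t$ follows from the uniform Banach--Steinhaus constant together with the fact that $\pi_k$ of the kernel in $y$ is bounded uniformly for $x\in\overline\Omega$.
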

We give 
an equivalent formulation
of the theorem
of Laurent Schwartz
in terms of Fourier
transforms:
\begin{thm}\label{boundfourier}
Let $B\subset \mathcal{D}^\prime(\mathbb{R}^d)$.
$$ \forall \chi\in \mathcal{D}(\mathbb{R}^d),\exists m\in\mathbb{N},\quad \sup_{t\in B} \Vert \theta^{-m}\widehat{t\chi}\Vert_{L^\infty}<+\infty$$ 
$$\Leftrightarrow B \text{ weakly bounded in }\mathcal{D}^\prime(\mathbb{R}^d)\Leftrightarrow B \text{ strongly bounded in }\mathcal{D}^\prime(\mathbb{R}^d).$$
\end{thm}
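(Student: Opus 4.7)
The plan is to close the loop $\mathrm{(iii)} \Rightarrow \mathrm{(i)} \Rightarrow \mathrm{(ii)} \Leftrightarrow \mathrm{(iii)}$, where (ii) and (iii) denote weak and strong boundedness in $\mathcal{D}^\prime(\mathbb{R}^d)$. The equivalence $\mathrm{(ii)} \Leftrightarrow \mathrm{(iii)}$ is the Banach--Steinhaus statement recalled in the appendix of Chapter~1 (weak and strong boundedness coincide in the dual of a Fr\'echet space, applied to $\mathcal{D}(K)$ for each compact $K$). So the real content is to relate Fourier decay estimates on $\widehat{t\chi}$ to the structural boundedness given by Theorem~\ref{LaurentSchwartzbounded}.

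For $\mathrm{(iii)} \Rightarrow \mathrm{(i)}$, I would fix $\chi \in \mathcal{D}(\mathbb{R}^d)$, pick an open set $\Omega$ with compact closure such that $\supp\chi \subset \Omega$, and invoke Theorem~\ref{LaurentSchwartzbounded}: there exists $\alpha$ and, for each $t \in B$, a function $f_t \in C^0(\overline{\Omega})$ with $t|_\Omega = \partial^\alpha f_t$ and $\sup_{t \in B} \Vert f_t \Vert_{L^\infty(\Omega)} < \infty$. Then I would integrate by parts in the definition of $\widehat{t\chi}$:
\begin{equation}
\widehat{t\chi}(\xi) = \langle \partial^\alpha f_t, \chi(\cdot)e^{-i\xi\cdot x}\rangle
= (-1)^{|\alpha|} \int_{\Omega} f_t(x)\, \partial^\alpha\!\bigl(\chi(x)e^{-i\xi \cdot x}\bigr)\, dx.
\end{equation}
Expanding by the Leibniz rule, $\partial^\alpha(\chi e^{-i\xi \cdot x})$ is bounded uniformly in $x$ by $C \pi_{|\alpha|}(\chi) (1+|\xi|)^{|\alpha|}$, and hence with $m = |\alpha|$ one gets $\sup_{t \in B} \Vert \theta^{-m} \widehat{t\chi}\Vert_{L^\infty} \leqslant C \pi_{|\alpha|}(\chi) \mathrm{vol}(\Omega) \sup_{t \in B}\Vert f_t\Vert_{L^\infty(\Omega)} < \infty$.

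For $\mathrm{(i)} \Rightarrow \mathrm{(ii)}$, I would take an arbitrary $\varphi \in \mathcal{D}(\mathbb{R}^d)$ and choose a plateau $\chi \in \mathcal{D}(\mathbb{R}^d)$ with $\chi = 1$ on $\supp\varphi$, so that $\langle t, \varphi \rangle = \langle t\chi, \varphi\rangle$. Since $t\chi \in \mathcal{E}^\prime$ and $\widehat{\varphi}$ is Schwartz, Parseval's formula gives
\begin{equation}
\langle t, \varphi \rangle = (2\pi)^{-d} \int_{\mathbb{R}^d} \widehat{t\chi}(\xi)\,\widehat{\varphi}(-\xi)\, d\xi.
\end{equation}
Using the hypothesis $|\widehat{t\chi}(\xi)| \leqslant C_\chi (1+|\xi|)^m$ uniformly in $t \in B$, together with the rapid decrease of $\widehat{\varphi}$, the integral is dominated by $C_\chi \int (1+|\xi|)^m |\widehat{\varphi}(-\xi)|\, d\xi < \infty$, which is a bound independent of $t \in B$. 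This gives the weak boundedness. The main (very mild) subtlety is just picking $\chi$ appropriately and being consistent with Fourier conventions; everything else reduces to the already-quoted structure theorem of L.~Schwartz and to the Banach--Steinhaus result from the appendix.
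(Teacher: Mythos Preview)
Your proof is correct, and for $\mathrm{(i)}\Rightarrow\mathrm{(ii)}$ and $\mathrm{(ii)}\Leftrightarrow\mathrm{(iii)}$ it matches the paper exactly. The only genuine difference is in the direction $\mathrm{(iii)}\Rightarrow\mathrm{(i)}$: you invoke the Schwartz structural representation of bounded families (Theorem~\ref{LaurentSchwartzbounded}) to write $t|_\Omega=\partial^\alpha f_t$ and then integrate by parts in $\widehat{t\chi}$, whereas the paper bypasses that representation entirely and argues directly from the uniform seminorm estimate provided by Banach--Steinhaus: strong boundedness gives $m$ and $C$ with $|\langle t,\varphi\rangle|\leqslant C\pi_m(\varphi)$ for all $t\in B$ and $\varphi\in\mathcal{D}_K$, and one simply plugs in $\varphi=\chi e_\xi$ (with $e_\xi(x)=e^{-ix\cdot\xi}$) to obtain $|\widehat{t\chi}(\xi)|\leqslant C\pi_m(\chi e_\xi)$, which is polynomial of degree $m$ in $\xi$. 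The paper's route is lighter since it avoids the deeper structure theorem; your route is also valid but uses a stronger input than necessary, effectively re-deriving the polynomial Fourier bound from a statement that itself rests on the same Banach--Steinhaus estimate.
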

We refer the reader
to the appendix of this chapter
for a proof of the above theorem. 
For any cone $\Gamma\subset T^\star\mathbb{R}^d$, 
let $\mathcal{D}^\prime_\Gamma$ 
be the set of distributions with wave front set in $\Gamma$.
We define the set of seminorms $\Vert .\Vert_{N,V,\chi}$ 
on $\mathcal{D}^\prime_\Gamma$. 
\begin{defi}
For all $\chi\in \mathcal{D}(\mathbb{R}^{d})$, for all closed cone $V\subset(\mathbb{R}^{d}\setminus \{0\})$ such that
$\left(\text{supp }\chi\times V\right)\cap \Gamma=\emptyset$,  
$\Vert t\Vert_{N,V,\chi}= \sup_{\xi\in V}\vert(1+\vert\xi\vert)^N\widehat{t\chi}(\xi)\vert$.
\end{defi}
We recall the definition
of the topology $\mathcal{D}^\prime_\Gamma$ (see \cite{Alesker} p$14$),
\begin{defi}
The topology of $\mathcal{D}^\prime_\Gamma $
is the weakest topology
that makes all seminorms
$\Vert .\Vert_{N,V,\chi}$
continuous
and which is stronger
than the weak topology
of $\mathcal{D}^\prime(\mathbb{R}^d)$.
Or it can be formulated as the topology which makes
all seminorms
$\Vert .\Vert_{N,V,\chi}$
and the seminorms of the weak topology:
\begin{equation}
\forall \varphi\in \mathcal{D}\left(\mathbb{R}^{d}\right) ,\vert\left\langle t,\varphi \right\rangle\vert=P_\varphi\left(t\right)
\end{equation}
continuous.
\end{defi}
We say that $B$ is bounded in $\mathcal{D}^\prime_\Gamma$, if $B$ is bounded in $\mathcal{D}^\prime$ and if for all seminorms $\Vert .\Vert_{N,V,\chi}$ defining the topology of $\mathcal{D}^\prime_\Gamma$,
$$\sup_{t\in B} \Vert t\Vert_{N,V,\chi}<\infty .$$
We also use the seminorms:
$$\forall \varphi\in\mathcal{D}(\mathbb{R}^d), \pi_m(\varphi)=\sup_{\vert\alpha\vert\leqslant m} \Vert \partial^\alpha\varphi\Vert_{L^\infty(\mathbb{R}^d)},$$
$$\forall \varphi\in\mathcal{E}(\mathbb{R}^d),\forall K\subset \mathbb{R}^d, \pi_{m,K}(\varphi)=\sup_{\vert\alpha\vert\leqslant m} \Vert \partial^\alpha\varphi\Vert_{L^\infty(K)}.$$

\paragraph{Warning!}
%We would like
%to thank
%Christian Brouder 
%for pointing
%out an
%important error
%in a draft
%version
%of this
%chapter and we would like to
%thank Professor Semyon
%Alesker
%for helping us 
%to improve
%the clarity
%of our statements
%concerning products
%of distributions.
In this chapter, 
we will prove 
that if
$\Gamma_1,\Gamma_2$ are two
closed conic sets
in $T^\bullet\mathbb{R}^d$
such that $\Gamma_1\cap-\Gamma_2=\emptyset$, if we set $\Gamma=\Gamma_1\cup\Gamma_2\cup(\Gamma_1+\Gamma_2)$,
then the product
$(t_1,t_2)\in \mathcal{D}^\prime_{\Gamma_1}\times \mathcal{D}^\prime_{\Gamma_2} \mapsto t_1t_2\in \mathcal{D}^\prime_{\Gamma}$
is \textbf{jointly and separately sequentially continuous} and \textbf{bounded} 
for 
the topology of $\mathcal{D}^\prime_{\Gamma_1}\times \mathcal{D}^\prime_{\Gamma_2}$.
In fact, Professor Alesker informed us
that
he found a counterexample
which proves
that the product is \textbf{not
topologically
bilinear continuous}. 
This comes from the fact that
the space $\mathcal{D}^\prime_\Gamma$
is not bornological (see \cite{BrouderD}), for instance a
bounded linear map from
$\mathcal{D}^\prime_\Gamma$ to $\mathbb{C}$
\textbf{may not be continuous}.
We also prove that the pull-back
by a smooth diffeomorphism
$t\in \mathcal{D}^\prime_{\Gamma}\mapsto t\circ\Phi \in \mathcal{D}^\prime_{\Phi^\star \Gamma} $
is sequentially continuous and \textbf{bounded} 
from $\mathcal{D}^\prime_{\Gamma}$
to $\mathcal{D}^\prime_{\Phi^\star \Gamma}$. 
\section{The product of distributions.}
\subsection{Approximation and coverings.}
%We discuss approximation of sets in the cotangent cone $T^\bullet\mathbb{R}^n$.
In order to prove various theorems on the product of distributions and to discuss the action of Fourier integral operators on distributions, we should be able to approximate 
any conic set of $T^\bullet \mathbb{R}^d$ by some union of 
simple cartesian products of the form $K\times V\subset T^\bullet \mathbb{R}^d$ 
where $K$ is a compact set in space and 
$V$ is a closed cone in $\mathbb{R}^{d\bullet}$. 
We denote by $\mathbb{R}^d\overset{\pi_1}{\leftarrow} T^\star \mathbb{R}^d \overset{\pi_2}{\rightarrow} \mathbb{R}^{d*}$ the two projections on the base space $\mathbb{R}^d$ and the momentum space $\mathbb{R}^{d*}$ respectively. 
\begin{lemm}\label{approxlemma1}
Let $\Gamma_1,\Gamma_2$ be two \textbf{non intersecting} closed conic sets in $T^\bullet \mathbb{R}^d$. Then there is a family of closed cones $(V_{j1},V_{j2})_{j\in J}$ and a cover $(U_j)_{j\in J}$ of $\mathbb{R}^d$
such that
$$\Gamma_k\subset \bigcup_{j\in J} U_j\times V_{jk} $$
and $\forall j\in J, V_{j1}\cap V_{j2}=\emptyset$.
\end{lemm}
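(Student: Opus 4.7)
The strategy is a standard fiberwise separation argument followed by an extension to a spatial neighborhood, concluded by a paracompactness/Lindel\"of extraction.

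First I would work pointwise. Fix $x\in\mathbb{R}^d$ and consider the ``fibers'' $F_k(x):=\{\xi\in\mathbb{R}^{d*}\setminus\{0\}\mid(x,\xi)\in\Gamma_k\}$, which are closed conic subsets of $\mathbb{R}^{d*}\setminus\{0\}$. Their traces $S_k(x):=F_k(x)\cap\mathbb{S}^{d-1}$ on the unit cosphere are closed subsets of $\mathbb{S}^{d-1}$, and by the hypothesis $\Gamma_1\cap\Gamma_2=\emptyset$ they are disjoint (some of them may be empty, which is harmless). Since $\mathbb{S}^{d-1}$ is a compact Hausdorff (hence normal) space, I can choose open sets $O_1(x),O_2(x)\subset\mathbb{S}^{d-1}$ with $S_k(x)\subset O_k(x)$ and $\overline{O_1(x)}\cap\overline{O_2(x)}=\emptyset$. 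Let $V_{x,k}$ be the closed cone generated by $\overline{O_k(x)}$; by construction $V_{x,1}\cap V_{x,2}=\emptyset$ in $\mathbb{R}^{d*}\setminus\{0\}$.

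The key step is to upgrade this fiberwise separation to a neighborhood of $x$: there exists an open $U_x\ni x$ such that
\begin{equation}
\Gamma_k\cap(U_x\times(\mathbb{R}^{d*}\setminus\{0\}))\subset U_x\times V_{x,k},\qquad k=1,2.
\end{equation}
I would prove this by contradiction. If it fails, there exist a sequence $x_n\to x$ and covectors $\xi_n$ with $(x_n,\xi_n)\in\Gamma_k$ but $\xi_n/|\xi_n|\notin O_k(x)$. By compactness of $\mathbb{S}^{d-1}$, extract a subsequence with $\xi_n/|\xi_n|\to\eta\notin O_k(x)$. Since $\Gamma_k$ is closed and conic, $(x,\eta)\in\Gamma_k$, so $\eta\in S_k(x)\subset O_k(x)$, a contradiction. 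This is the only nontrivial step; everything else is bookkeeping.

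Finally, the collection $\{U_x\}_{x\in\mathbb{R}^d}$ is an open cover of $\mathbb{R}^d$. Since $\mathbb{R}^d$ is second countable (in particular Lindel\"of, and actually paracompact), I extract a countable locally finite subcover $(U_j)_{j\in J}$ with corresponding closed cones $V_{j,1}:=V_{x_j,1}$ and $V_{j,2}:=V_{x_j,2}$. By construction $V_{j,1}\cap V_{j,2}=\emptyset$ for each $j$, and for every $(x,\xi)\in\Gamma_k$ there is some $j$ with $x\in U_j$, hence $\xi\in V_{j,k}$, which gives the inclusion $\Gamma_k\subset\bigcup_{j\in J}U_j\times V_{j,k}$. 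The main (and only) technical point is the sequential compactness argument that propagates the disjointness of the fibers into a spatial neighborhood; once that is in place, the rest is a purely topological assembly.
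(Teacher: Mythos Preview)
Your proof is correct and follows essentially the same strategy as the paper: a fiberwise separation at each point, a sequential compactness argument on the unit cosphere to propagate this to a spatial neighborhood, and a locally finite extraction. The only organizational difference is that the paper defines the cones as $V_{kx}(\varepsilon)=\pi_2(\Gamma_k|_{\overline{U_x(\varepsilon)}})$ and shrinks $\varepsilon$ until they become disjoint, whereas you first pick disjoint closed conic neighborhoods of the fibers via normality of $\mathbb{S}^{d-1}$ and then find a suitable $U_x$; the underlying compactness contradiction is identical.
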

\begin{proof}
For all $x\in\mathbb{R}^d$, 
let $U_x(\varepsilon)$ be an open ball of radius 
$\varepsilon$ around $x$ and
$\Gamma_k|_x=\Gamma_k\cap T_x^\bullet \mathbb{R}^d$.
Let 
$V_{kx}(\varepsilon)=\pi_2\left(\Gamma_k|_{\overline{U_x(\varepsilon)}}\right)$ 
be a closed cone
which contains $\Gamma_k|_x$. 
We first establish that since 
$\Gamma_1|_x\cap \Gamma_2|_x=\emptyset$
and $\underset{\varepsilon>0}{\cap}\pi_2\left(\Gamma_k|_{U_x(\varepsilon)}\right)=\Gamma_k|_x$
we may assume that we can 
choose $\varepsilon$
small enough in such a way that
$V_{1x}\cap V_{2x}=\emptyset$:
assume that 
there exists a decreasing sequence
$\varepsilon_n\rightarrow 0$
such that
$$\forall n, V_{1x}(\varepsilon_n)\cap V_{2x}(\varepsilon_n)=\emptyset,$$
then let $\eta_n\in  V_{1x}(\varepsilon_n)\cap V_{2x}(\varepsilon_n)$ 
for all $n$
where we may assume that 
$\vert\eta_n\vert=1$.
Using the definition
of $V_{kx}(\varepsilon_n)$, there is a 
sequence $x_{kn}$ s.t. $(x_{kn};\eta_n)\in \Gamma_k|_{\overline{U_x(\varepsilon_n)}}$. 
$(x_{kn};\eta_n)$ 
lives in the compact set 
$\overline{U_x(\varepsilon_0)}\times \mathbb{S}^{d-1}$
and we can therefore
extract a convergent subsequence
which converges to $(x_k;\eta_k)\in\Gamma_k$
since $\Gamma_k$ is closed.
Furthermore $\eta_1=\eta_2=\eta$
and $x_{kn}\in \overline{U_x(\varepsilon_n)}$
implies $\lim_{n\rightarrow \infty} x_{kn}=x$
thus $(x;\eta)\in\Gamma_1\cap\Gamma_2$,
contradiction ! 
For all $x$, 
we thus have 
$\Gamma_k|_{U_x}\subset U_x\times V_{kx}$. 
Since $(U_x)_{x\in\mathbb{R}^d}$
forms an open cover
of $\mathbb{R}^d$,
we can extract a locally finite subcover
$(U_j)_{j\in J}$
and $\Gamma_k\subset \bigcup_{j\in J} U_j\times V_{jk}$.
\end{proof}
\begin{lemm}\label{approxlemma2} 
Let $\Gamma$ be a closed
conic set
in $T^\bullet\mathbb{R}^d$.
For every partition of unity $(\varphi^2_j)_{j\in J}$
of $\mathbb{R}^d$ 
and family of functions 
$(\alpha_j)_{j\in J}$ in $C^\infty(\mathbb{R}^d\setminus 0)$, homogeneous of degree $0$, 
$0\leqslant \alpha_j\leqslant 1$ 
such that
$\Gamma\bigcap \left(\bigcup_{j\in J}\text{supp }\varphi_j\times\text{supp }(1-\alpha_j)\right)=\emptyset$,
we have
$$\forall t\in \mathcal{D}^\prime_{\Gamma}, t=\sum_{j\in J} \underset{\text{singular part}}{\underbrace{\varphi_j\mathcal{F}^{-1}\left(\alpha_j\widehat{t\varphi_j}\right)}}
+\underset{\text{smooth part}}{\underbrace{\varphi_j\mathcal{F}^{-1}\left((1-\alpha_j)\widehat{t\varphi_j}\right)}}.$$
\end{lemm}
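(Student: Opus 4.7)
The plan is to recognize that the claimed formula is essentially a tautology combining two standard decompositions: a partition of unity in configuration space and the momentum-space split $1 = \alpha_j + (1-\alpha_j)$, glued together via Fourier inversion. The geometric hypothesis on $\Gamma$ and the $\alpha_j$ plays no role in the identity itself; it only serves to justify the labels ``singular part'' and ``smooth part'' (the second term is smooth thanks to the wave front set characterization of H\"ormander, which I would not need to invoke for the bare equality).

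More concretely, I would proceed in three short steps. First, since $(\varphi_j^2)_{j\in J}$ is a (locally finite) partition of unity of $\mathbb{R}^d$, one has $\sum_j \varphi_j^2 = 1$ as a pointwise smooth identity, and therefore
\begin{equation*}
t = \sum_{j\in J} \varphi_j^2 \, t = \sum_{j\in J} \varphi_j \, (\varphi_j t) \quad \text{in } \mathcal{D}^\prime(\mathbb{R}^d),
\end{equation*}
the sum being locally finite since $\text{supp }\varphi_j$ is. Second, fix $j$. Because $\varphi_j \in \mathcal{D}(\mathbb{R}^d)$, the product $\varphi_j t$ is a compactly supported distribution, hence its Fourier transform $\widehat{t\varphi_j}$ is a smooth function of polynomial growth, i.e.\ a tempered distribution. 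Third, write $1 = \alpha_j(\xi) + (1-\alpha_j(\xi))$ on $\mathbb{R}^d \setminus \{0\}$; each of $\alpha_j \widehat{t\varphi_j}$ and $(1-\alpha_j)\widehat{t\varphi_j}$ is then a tempered distribution (the potential singularity of $\alpha_j$ at the origin is harmless when multiplied by a smooth function), and Fourier inversion gives
\begin{equation*}
\varphi_j t = \mathcal{F}^{-1}\bigl(\widehat{t\varphi_j}\bigr) = \mathcal{F}^{-1}\bigl(\alpha_j \widehat{t\varphi_j}\bigr) + \mathcal{F}^{-1}\bigl((1-\alpha_j)\widehat{t\varphi_j}\bigr).
\end{equation*}
Multiplying by $\varphi_j$ and summing over $j$ yields the claimed decomposition.

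There is no real obstacle here: the only mild subtlety is to ensure that $\alpha_j \widehat{t\varphi_j}$ and $(1-\alpha_j)\widehat{t\varphi_j}$ are well-defined as tempered distributions despite $\alpha_j$ being $0$-homogeneous and hence only defined off the origin. Since $\widehat{t\varphi_j}$ is smooth (being the Fourier transform of a compactly supported distribution) with at most polynomial growth, and $\alpha_j$ is bounded, the products extend smoothly through $0$ up to an issue of regularity of $\alpha_j$ at the origin; multiplying by a cutoff which equals $1$ at infinity and is $\alpha_j$ outside a small neighborhood of the origin would give the same inverse Fourier transform up to a Schwartz function, which is absorbed into the smooth part. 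The geometric interpretation --- that the second piece is in fact $C^\infty$ --- would follow from the hypothesis $\Gamma \cap \bigl(\text{supp }\varphi_j \times \text{supp}(1-\alpha_j)\bigr) = \emptyset$ together with the fact that $\widehat{t\varphi_j}$ is rapidly decreasing in directions outside of $\pi_2(\mathrm{WF}(t\varphi_j))$, but this is not needed for the equality asserted in the lemma.
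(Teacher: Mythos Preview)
Your proposal is correct. The algebraic identity is indeed essentially tautological, and you verify it carefully: partition of unity in space, Fourier inversion on each compactly supported piece $\varphi_j t$, then the split $1=\alpha_j+(1-\alpha_j)$ in frequency. You are also right that the only mild technical point is the behaviour of $\alpha_j$ near the origin, and your remedy (modify by a cutoff near $\xi=0$, absorbing the Schwartz correction into the smooth part) is standard and adequate.

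The paper's proof takes the complementary emphasis. It treats the identity itself as obvious and spends its effort entirely on justifying the underbrace label ``smooth part'': it invokes H\"ormander's Lemma~8.2.1 to conclude that, since $(\text{supp }\varphi_j\times\text{supp}(1-\alpha_j))\cap\Gamma=\emptyset$, the function $(1-\alpha_j)\widehat{t\varphi_j}$ has rapid decay, so its inverse Fourier transform is $C^\infty$. You sketch exactly this argument in your final paragraph but choose not to regard it as part of the claim. In the paper's context the smoothness is the substantive point (it is what makes the decomposition useful downstream), so the paper's emphasis is arguably more to the point; conversely, your write-up is more explicit about why the displayed equality holds and about the well-definedness of the individual summands. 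The two arguments together make a complete proof; neither contains a gap.
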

\begin{proof}   
Let $\mathcal{D}^\prime_{\Gamma}$ denote the 
set of all distributions with wave front set in $\Gamma$.
We use the highly non trivial 
lemma $8.2.1$ of \cite{Hormander}:
Let $t\in \mathcal{D}^\prime_{\Gamma}$, 
for any $\varphi\in \mathcal{D}(\mathbb{R}^d)$, 
for any $V$
such that 
$\left(\text{supp }\varphi\times V\right)\cap\Gamma=\emptyset$, 
we have
$\forall N, \Vert t\Vert_{N,V,\varphi}<\infty .$ 
Set the family of functions $V_j=\text{supp }(1-\alpha_j)$ then
$(\text{supp }\varphi_j\times\text{supp }(1-\alpha_j))\cap\Gamma=\emptyset$ hence $(1-\alpha_j)\widehat{t\varphi_j} $ has fast decrease at infinity 
and its inverse Fourier transform is a smooth function which yields
the result.
\end{proof}
\subsection{The product is bounded.}
A relevant example of
products of distributions first appeared in the work of Alberto Calderon in $1965$. 
A nice exposition of this work can be found in the article \cite{Meyerprod} by Yves Meyer. 
Actually, Meyer defines $\Gamma$-\emph{holomorphic} distributions as Schwartz distributions in $S^\prime\left(\mathbb{R}^d\right)$ the
Fourier transform of which 
is supported 
on a cone $\overline{\Gamma}\subset \mathbb{R}^d$
where $\Gamma\subset\mathbb{R}^d\setminus 0$ 
is defined by the inequality
$0<\vert\xi\vert\leqslant \delta \xi_d$ where $\delta>1$.
Notice that $\xi_d$ 
must be positive and that $0\notin \Gamma +\Gamma$. 
Then Meyer defines the functional spaces $L^p_\alpha$ 
which are analogs of the
classical Sobolev spaces $W^{\alpha,p}$ 
for positive $\alpha$,
and proves that for any pair 
$(t_1,t_2)\in L^p_\alpha\times L^q_\beta$
the product $t_1t_2$ makes sense, $t_1t_2$ is 
$\Gamma$-\emph{holomorphic} and belongs to the functional
space $L^r_{\alpha+\beta}$ where $r^{-1}=p^{-1}+q^{-1}$. Most importantly, Meyer proves there is a \emph{bilinear continuous mapping} $P_\Gamma$ which satisfies a H\"older like estimate and coincides with the product when $t_1,t_2$ are $\Gamma$-\emph{holomorphic}.

 In the same spirit, we will prove bilinear estimates for the product of distributions. 
The bilinear estimates 
are formulated 
in terms of 
the seminorms $\Vert.\Vert_{N,V,\chi}$ 
defining the topology of $\mathcal{D}^\prime_\Gamma$
and
the seminorms:
\begin{equation}
\Vert \theta^{-m} \widehat{t\chi} \Vert_{L^\infty}.
\end{equation}
which control boundedness in 
$\mathcal{D}^\prime$ (but they
do not
define the weak
topology of $\mathcal{D}^\prime$).
We closely follow the exposition of \cite{Eskin} thm $(14.3)$.
\begin{lemm}\label{lemm0}
Let $\Gamma_1,\Gamma_2$ be two conic sets in $T^\bullet\mathbb{R}^d$. 
If $\Gamma_1\cap-\Gamma_2=\emptyset$,
then there exists a partition of unity $(\varphi^2_j)_{j\in J}$ 
and a family of closed cones $(W_{j1},W_{j2})_{j\in J}$ in $\mathbb{R}^d\setminus {0}$ such that $\forall j\in J, W_{j1} \cap -W_{j2}=\emptyset$ and $\Gamma_k\subset \left(\bigcup_{j\in J} supp(\varphi_j)\times W_{jk} \right),(k=1,2)$.
\end{lemm}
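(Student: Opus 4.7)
The plan is to reduce the statement essentially directly to Lemma \ref{approxlemma1}, applied not to the pair $(\Gamma_1,\Gamma_2)$ but to the pair $(\Gamma_1,-\Gamma_2)$. Since the antipodal map $\xi\mapsto -\xi$ is a homeomorphism of $\mathbb{R}^{d}\setminus\{0\}$ that commutes with dilations, the set $-\Gamma_2$ is again closed and conic, and the hypothesis $\Gamma_1\cap -\Gamma_2=\emptyset$ is exactly the disjointness required by the approximation lemma. That lemma therefore furnishes a locally finite open cover $(U_j)_{j\in J}$ of $\mathbb{R}^d$ and two families of closed cones $(V_{j1},V_{j2})_{j\in J}$ such that $V_{j1}\cap V_{j2}=\emptyset$ for every $j$, $\Gamma_1\subset\bigcup_j U_j\times V_{j1}$ and $-\Gamma_2\subset\bigcup_j U_j\times V_{j2}$.

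I then set $W_{j1}:=V_{j1}$ and $W_{j2}:=-V_{j2}$. The identity $W_{j1}\cap -W_{j2}=V_{j1}\cap V_{j2}=\emptyset$ is immediate, and applying $-\mathrm{Id}$ to the second factor in the inclusion $-\Gamma_2\subset \bigcup_j U_j\times V_{j2}$ yields $\Gamma_2\subset \bigcup_j U_j\times W_{j2}$. So the correct cones are already at hand; only the partition of unity subordinated to $(U_j)_j$ remains to be produced.

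For the partition of unity I use the standard device adapted to the notation $(\varphi_j^2)_j$: by paracompactness and local finiteness of $(U_j)_j$, I choose nonnegative $\psi_j\in\mathcal{D}(U_j)$ with $\sum_j \psi_j^2>0$ everywhere on $\mathbb{R}^d$, and set $\varphi_j:=\psi_j\bigl(\sum_k \psi_k^2\bigr)^{-1/2}$. Then $\sum_j \varphi_j^2=1$, $\mathrm{supp}(\varphi_j)\subset U_j$, and $\{\varphi_j\neq 0\}$ still covers $\mathbb{R}^d$. For every $(x,\xi)\in\Gamma_k$, the equality $\sum_j \varphi_j(x)^2=1$ forces the existence of an index $j$ with $\varphi_j(x)\neq 0$, so that $x\in\mathrm{supp}(\varphi_j)\subset U_j$; by the fiber inclusion $\Gamma_k|_{U_j}\subset U_j\times W_{jk}$ inherited from the approximation lemma, this gives $\xi\in W_{jk}$ and hence $\Gamma_k\subset\bigcup_j\mathrm{supp}(\varphi_j)\times W_{jk}$.

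The proof has essentially no obstacle once one notices that the hypothesis is phrased in terms of $\Gamma_1\cap -\Gamma_2$ precisely so that Lemma \ref{approxlemma1}, applied after an antipodal flip in the momentum factor, does all the work. The only subtlety worth stating explicitly is that the fiber inclusions provided by the approximation lemma refer to the open cover $(U_j)_j$, not to the compact supports $\mathrm{supp}(\varphi_j)$, so one must invoke the fact that at every point of $\mathbb{R}^d$ at least one $\varphi_j$ is nonzero in order to transfer the inclusion from $(U_j)_j$ to $(\mathrm{supp}(\varphi_j))_j$.
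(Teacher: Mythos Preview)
Your proof is correct and follows exactly the same approach as the paper: apply Lemma~\ref{approxlemma1} to the pair $(\Gamma_1,-\Gamma_2)$, flip the resulting cones back via $W_{j2}=-V_{j2}$, and choose a partition of unity subordinate to the cover $(U_j)_j$. You have simply made explicit the antipodal relabeling and the passage from $U_j$ to $\mathrm{supp}(\varphi_j)$ that the paper leaves to the reader.
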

\begin{proof}   
% We cover $K\times V$ in $\textbf{cotangent}$ space $K\times V\subset\bigcup_{j\in J}K_j\times V$ such that $$\forall j\in J, \left( K_j\times V\right) \cap \left(\Sigma(\varphi_jt_1)\cup\Sigma(\varphi_jt_2)\cup
%\left(\Sigma(\varphi_jt_1)+\Sigma(\varphi_jt_2)\right)\right)=\emptyset$$ 
% 
We use our approximation lemma 
for $\Gamma_1$ and $-\Gamma_2$.
The approximation lemma gives us a pair of covers
$$ \Gamma_k \subset \bigcup_{j\in J} U_j \times W_{jk}, k\in\{1,2\} ,$$ 
then pick a partition of unity $(\varphi^2_j)_{j\in J}$ 
subordinated to the cover $\bigcup_{j\in J}U_j$
and we are done.
% For any compactly supported distribution $t$, we denote $\Sigma(t)$ 
%the complement of the set $$\Sigma(t)=&^c\{\xi_0| \exists \delta, \forall \xi, \vert \frac{\xi}{\vert\xi\vert}-\frac{\xi_0}{\vert\xi_0\vert}\vert\leqslant \delta , \widehat{t}(\tau \xi)=O(\tau^{-\infty})   \}$$ 
% Set $\Sigma(\varphi_jt_k)=W_{jk},k=1,2$.
% 
% For $K_j$ sufficiently small, we can make
%$W_{j1} \cap -W_{j2}=\emptyset $ for all $j$. 
%
% To prove it, let us recall that forall $x$ in $K$, $\Gamma_{1x}\cap -\Gamma_{2x}=\emptyset$ means $\forall x\in K, \forall (\xi,\eta) \in \Gamma_{1x}\times\Gamma_{2x}, \vert\frac{\xi}{\vert\xi\vert}-\frac{\eta}{\vert\eta\vert}\vert  \geqslant \delta$, we have to choose the support of $\varphi_j$ small enough so that $\forall x,j, \varphi_j(x)\neq 0,  \forall \xi \in  W_{jk}, \text{dist}(\frac{\xi}{\vert\xi\vert},\Gamma_x)  \leqslant \frac{\delta}{3} $. Then $\forall (\xi,\eta)\in W_{j1}\times W_{j2}$ rewrite $\frac{\xi}{\vert\xi\vert}-\frac{\eta}{\vert\eta\vert}=\frac{\xi}{\vert\xi\vert}-\xi_1+\xi_1-\xi_2+\xi_2-\frac{\eta}{\vert\eta\vert}$ for $\xi_1,\xi_2$ in $\Gamma_{x1},\Gamma_{x2}$. Hence $\vert\frac{\xi}{\vert\xi\vert}-\frac{\eta}{\vert\eta\vert}\vert \geqslant \vert\vert\xi_1-\xi_2\vert-\vert \frac{\xi}{\vert\xi\vert}-\xi_1+ \xi_2-\frac{\eta}{\vert\eta\vert} \vert\vert \geqslant \frac{\delta}{3} $ because we choose $\xi_k$ such that $\vert \frac{\xi}{\vert\xi\vert}-\xi_1\vert= \text{dist }(\frac{\xi}{\vert\xi\vert},W_{j1})$ and $\vert \frac{\eta}{\vert\eta\vert}-\xi_2\vert= \text{dist }(\frac{\eta}{\vert\eta\vert},W_{j2})$.
% 
\end{proof}
\begin{lemm}\label{lemm1}
Let $\Gamma_1,\Gamma_2$ be two cones in $T^\bullet \mathbb{R}^d$ and
let $m_1,m_2$ be given non negative integers.
Assume $\Gamma_1\cap -\Gamma_2=\emptyset$  
then for all $\chi\in  \mathcal{D}(\mathbb{R}^d)$, 
for all $N_2\geqslant N_1+d+1$
there exists $C$ 
such that for all 
$(t_1,t_2)\in \mathcal{D}_{\Gamma_1}^\prime(\mathbb{R}^d)\times \mathcal{D}_{\Gamma_2}^\prime(\mathbb{R}^d)$ 
satisfying $\Vert \theta^{-m_1} \widehat{t_{1}\chi\varphi_j}\Vert_{L^\infty}<+\infty$ and
$\Vert \theta^{-m_2} \widehat{t_{2}\chi\varphi_j}\Vert_{L^\infty}<+\infty$,
we have the bilinear estimate:  
$$\Vert \theta^{-(m_1+m_2+d)} \widehat{t_1t_2\chi^2}(\xi)\Vert_{L^\infty}$$ 
$$\leqslant C\sum_{j\in J} \left(\Vert \theta^{-m_1} \widehat{t_{1}\chi\varphi_j}\Vert_{L^\infty} +\Vert t_{1}\chi \Vert_{N_1,V_{j1},\varphi_j}\right)\left(\Vert \theta^{-m_2} \widehat{t_{2}\chi\varphi_j}\Vert_{L^\infty}+\Vert t_{2}\chi \Vert_{N_2,V_{j2},\varphi_j} \right)$$ 
for some seminorms $\Vert . \Vert_{N_k,V_{jk},\varphi_j}$ of $\mathcal{D}^\prime_{\Gamma_k},k=1,2$.
\end{lemm}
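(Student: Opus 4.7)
The plan is to localize both factors via the partition of unity provided by Lemma \ref{lemm0}, reduce $\widehat{t_1 t_2 \chi^2}$ to a locally finite sum of convolutions $\widehat{t_1 \chi \varphi_j} \ast \widehat{t_2 \chi \varphi_j}$, and control each convolution by splitting the integration variable $\eta$ according to the conical disjointness $W_{j1} \cap (-W_{j2}) = \emptyset$.

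First I would apply Lemma \ref{lemm0} to obtain a locally finite partition of unity $(\varphi_j^2)_{j \in J}$ and closed cones $(W_{j1}, W_{j2})$ with $W_{j1} \cap (-W_{j2}) = \emptyset$ and $\Gamma_k \subset \bigcup_j \mathrm{supp}\,\varphi_j \times W_{jk}$. For each $j$ I choose closed cones $V_{j1}, V_{j2}$ that are conic neighborhoods of the complements of $W_{j1}, W_{j2}$ respectively, thin enough that $(\mathrm{supp}\,\chi\varphi_j \times V_{jk}) \cap \Gamma_k = \emptyset$ (so the seminorms $\Vert t_k\chi\Vert_{N_k, V_{jk}, \varphi_j}$ control rapid decay of $\widehat{t_k \chi \varphi_j}$ on $V_{jk}$), while the complements $\tilde W_{jk} := \mathbb{R}^d \setminus V_{jk}$ retain the disjointness $\tilde W_{j1} \cap (-\tilde W_{j2}) = \emptyset$. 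Since $\sum_j \varphi_j^2 = 1$ on a neighborhood of $\mathrm{supp}\,\chi$, one has $t_1 t_2 \chi^2 = \sum_j (t_1 \chi \varphi_j)(t_2 \chi \varphi_j)$ with only finitely many terms active; each factor is compactly supported with Fourier transform of at most polynomial growth by Theorem \ref{boundfourier}, so the convolution representation $\widehat{t_1 t_2 \chi^2}(\xi) = \sum_j \bigl(\widehat{t_1 \chi \varphi_j} \ast \widehat{t_2 \chi \varphi_j}\bigr)(\xi)$ is legitimate.

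For each $j$ I would then split the integration domain into three regions, $A_1 = \{\eta \in V_{j2}\}$, $A_2 = \{\eta \notin V_{j2},\ \xi - \eta \in V_{j1}\}$ and $A_3 = \{\eta \notin V_{j2},\ \xi - \eta \notin V_{j1}\}$. On $A_1$, use the rapid decay $|\widehat{t_2 \chi \varphi_j}(\eta)| \leq \Vert t_2\chi\Vert_{N_2, V_{j2}, \varphi_j}(1+|\eta|)^{-N_2}$ together with $|\widehat{t_1 \chi \varphi_j}(\xi-\eta)| \leq \Vert \theta^{-m_1}\widehat{t_1\chi\varphi_j}\Vert_{L^\infty}(1+|\xi-\eta|)^{m_1}$; Peetre's inequality $(1+|\xi-\eta|)^{m_1} \leq 2^{m_1}(1+|\xi|)^{m_1}(1+|\eta|)^{m_1}$ turns the convolution integral into an integral of $(1+|\eta|)^{m_1 - N_2}$, convergent when $N_2 \geq N_1 + d + 1$ is taken sufficiently large relative to $m_1$, and produces a bound of the form $C(1+|\xi|)^{m_1}$ times $\Vert \theta^{-m_1}\widehat{t_1\chi\varphi_j}\Vert_{L^\infty}\Vert t_2\chi\Vert_{N_2, V_{j2}, \varphi_j}$. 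On $A_2$ the roles are symmetric, giving $C(1+|\xi|)^{m_2}\Vert \theta^{-m_2}\widehat{t_2\chi\varphi_j}\Vert_{L^\infty}\Vert t_1\chi\Vert_{N_1, V_{j1}, \varphi_j}$, where the inequality $N_2 \geq N_1 + d + 1$ leaves enough slack to ensure both Peetre integrals converge.

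The main obstacle is the bad region $A_3$, where $\eta \in \tilde W_{j2}$ and $\xi - \eta \in \tilde W_{j1}$: neither factor decays rapidly, so I must exploit geometry. By compactness on the unit sphere, the disjointness $\tilde W_{j1} \cap (-\tilde W_{j2}) = \emptyset$ implies the existence of $\delta > 0$ with
\[
|u + v| \geq \delta (|u| + |v|), \qquad \forall u \in \tilde W_{j1},\ v \in \tilde W_{j2}.
\]
Applied to $u = \xi - \eta$ and $v = \eta$, this gives $|\xi| \geq \delta(|\eta| + |\xi-\eta|)$, so $A_3$ is contained in the ball $|\eta| \leq \delta^{-1}|\xi|$ of volume $O(|\xi|^d)$. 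Bounding both factors by their polynomial growth then yields a contribution of order $(1+|\xi|)^{m_1+m_2+d}\Vert \theta^{-m_1}\widehat{t_1\chi\varphi_j}\Vert_{L^\infty}\Vert \theta^{-m_2}\widehat{t_2\chi\varphi_j}\Vert_{L^\infty}$, which dictates the final weight exponent. Summing the three regional estimates and then over the finitely many active $j$ yields the claimed bilinear estimate. The key technical delicacy is the cone construction in the setup: the thickening must simultaneously be thin enough that $V_{jk}$ still avoids $\Gamma_k$ over $\mathrm{supp}\,\chi\varphi_j$ and broad enough that its complement $\tilde W_{jk}$ remains a conic neighborhood of $W_{jk}$ still satisfying $\tilde W_{j1} \cap (-\tilde W_{j2}) = \emptyset$, both being consequences of the openness of the disjointness condition on the compact unit sphere.
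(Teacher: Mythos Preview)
Your proposal is correct and follows essentially the same strategy as the paper: localize via Lemma \ref{lemm0}, write $\widehat{t_1t_2\chi^2}$ as a finite sum of convolutions, and split each convolution according to whether the arguments lie in the ``bad'' cones where only polynomial bounds are available or in the ``good'' complementary cones where rapid decay holds; the geometric core is identical, namely that $W_{j1}\cap(-W_{j2})=\emptyset$ forces $|\xi|\gtrsim|\eta|+|\xi-\eta|$ on the bad region, confining it to a ball of radius $O(|\xi|)$.

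The only structural difference is that the paper introduces smooth homogeneous cutoffs $\alpha_{jk}$ in frequency (with $\alpha_{jk}=1$ on $W_{jk}$ and $\text{supp}\,\alpha_{j1}\cap-\text{supp}\,\alpha_{j2}=\emptyset$) and decomposes each convolution into \emph{four} pieces $I_1,\dots,I_4$ according to the product $(\alpha_{j1}+(1-\alpha_{j1}))(\alpha_{j2}+(1-\alpha_{j2}))$, whereas you use sharp characteristic functions of cones and three regions $A_1,A_2,A_3$ (your $A_1$ absorbs the paper's $I_3$ and $I_4$). This is a harmless simplification and in fact explains why the specific constraint $N_2\geqslant N_1+d+1$ plays no distinguished role in your estimates: that condition arises in the paper precisely from the piece $I_4$ where both rapid-decay seminorms are paired, which you never isolate. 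One small wording slip: your $V_{jk}$ should be closed cones \emph{contained in} $W_{jk}^c$ (so that $\tilde W_{jk}=V_{jk}^c$ is a slight thickening of $W_{jk}$), not ``neighborhoods of the complements''; your concluding sentence describes the correct picture.
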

Before we prove the lemma, let us explain the crucial consequence
of this lemma for the product of distributions.
Let $B_k,k\in\{1,2\}$ be bounded subsets of $\mathcal{D}_{\Gamma_k}^\prime(\mathbb{R}^d),k\in\{1,2\}$.
Then for each fixed $\chi$, there exists a pair $m_1,m_2$ such that the r.h.s. of the bilinear estimate
is bounded for all $t_1,t_2$ describing $B_1\times B_2$ by theorem (\ref{vouluChristian}). 
Thus for each fixed $\chi^2\in\mathcal{D}(\mathbb{R}^d)$, there exists
an integer $m_1+m_2+d$ such that $\Vert \theta^{-(m_1+m_2+d)} \widehat{t_1t_2\chi^2}(\xi)\Vert_{L^\infty}$ is bounded for all $t_1,t_2$ describing $B_1\times B_2$. Then this implies again by (\ref{vouluChristian}) 
that $t_1t_2$ is bounded in $\mathcal{D}^\prime(\mathbb{R}^d)$. So the consequence of this lemma
can be summarized as follows
\begin{coro}
Let $\Gamma_1,\Gamma_2$ be two cones in $T^\bullet \mathbb{R}^d$.
Assume $\Gamma_1\cap -\Gamma_2=\emptyset$.
Then the product 
$(t_1,t_2)\in \mathcal{D}_{\Gamma_1}^\prime(\mathbb{R}^d)\times \mathcal{D}_{\Gamma_2}^\prime(\mathbb{R}^d) \mapsto t_1t_2\in \mathcal{D}^\prime(\mathbb{R}^d) $
is well defined and bounded.
\end{coro}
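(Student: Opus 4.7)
The plan is to combine the bilinear estimate of Lemma \ref{lemm1} with the characterization of $\mathcal{D}^\prime$-boundedness provided by Theorem \ref{boundfourier}. Well-definedness of the product $t_1 t_2$ as an element of $\mathcal{D}^\prime(\mathbb{R}^d)$ under the hypothesis $\Gamma_1 \cap -\Gamma_2 = \emptyset$ is essentially Hörmander's classical product theorem, and in the present framework it is already built into the Fourier convolution computation underlying Lemma \ref{lemm1}, so the real content to verify is boundedness of the bilinear map.

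I would take arbitrary bounded subsets $B_1 \subset \mathcal{D}^\prime_{\Gamma_1}$ and $B_2 \subset \mathcal{D}^\prime_{\Gamma_2}$ and show that $\{t_1 t_2 \telque (t_1,t_2) \in B_1 \times B_2\}$ is bounded in $\mathcal{D}^\prime(\mathbb{R}^d)$. By Theorem \ref{boundfourier}, this reduces to showing that for every $\chi \in \mathcal{D}(\mathbb{R}^d)$ there exists an integer $m$ such that $\sup_{(t_1,t_2)} \Vert \theta^{-m}\widehat{t_1 t_2 \chi^2}\Vert_{L^\infty} < \infty$.

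Fix such a $\chi$. Since $\chi$ has compact support and the partition $(\varphi_j^2)_{j \in J}$ produced by Lemma \ref{lemm0} is locally finite, only finitely many indices $j$ actually intervene. For each such $j$, Theorem \ref{boundfourier} applied to the bounded family $B_k$ and to the test function $\chi\varphi_j$ furnishes an integer $m_k(j)$ with $\sup_{t_k \in B_k}\Vert \theta^{-m_k(j)}\widehat{t_k \chi\varphi_j}\Vert_{L^\infty} < \infty$; setting $m_k = \max_j m_k(j)$ gives a single exponent that works uniformly. Meanwhile, boundedness of $B_k$ in $\mathcal{D}^\prime_{\Gamma_k}$ directly controls the seminorms $\Vert t_k \chi\Vert_{N_k,V_{jk},\varphi_j}$ appearing on the right-hand side of Lemma \ref{lemm1}. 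Feeding these uniform bounds into the bilinear estimate yields
$$\sup_{(t_1,t_2) \in B_1 \times B_2} \Vert \theta^{-(m_1+m_2+d)}\widehat{t_1 t_2 \chi^2}\Vert_{L^\infty} < \infty,$$
and one last application of Theorem \ref{boundfourier}, this time in the reverse direction, concludes.

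There is essentially no obstacle at this stage, since the entire analytic difficulty has been absorbed into Lemma \ref{lemm1}, whose proof rests in turn on the geometric approximation Lemmas \ref{approxlemma1} and \ref{approxlemma2} together with the splitting of each distribution into a singular and a smooth part. The only minor technicality is justifying that the sum over $j \in J$ is effectively finite for any prescribed $\chi$, which is immediate from local finiteness of the partition. Note that this argument establishes boundedness, not continuity: as the paper's warning emphasizes, the product is in fact \emph{not} topologically bilinear continuous in general, reflecting the failure of $\mathcal{D}^\prime_\Gamma$ to be bornological.
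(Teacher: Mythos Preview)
Your argument is correct and follows essentially the same route as the paper's own explanation preceding the corollary: take bounded sets $B_1, B_2$, invoke Theorem \ref{boundfourier} (equivalently \ref{vouluChristian}) to obtain uniform integers $m_1, m_2$ for each fixed $\chi$, feed these into the bilinear estimate of Lemma \ref{lemm1}, and apply Theorem \ref{boundfourier} once more in the reverse direction. Your explicit remark that only finitely many $j$ intervene (by compact support of $\chi$ and local finiteness of $(\varphi_j^2)_j$) is a useful clarification that the paper leaves implicit.
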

Now let us return to the proof of lemma (\ref{lemm1}).

\begin{proof} 
By Lemma \ref{lemm0}
$\Gamma_k\subset \bigcup_{j\in J}\text{supp }\varphi_j\times W_{jk},k\in\{1,2\}$ for a partition of unity $(\varphi^2_j)_{j\in J}$ and
for a family of closed cones $(W_{j1},W_{j2})_{j\in J}$ in $\mathbb{R}^d\setminus {0}$ such that $\forall j\in J, W_{j1} \cap -W_{j2}=\emptyset$.
In a similar way to the construction of the approximation lemma, we have $$t_1t_2\chi^2=\sum_{j\in J} (\chi\varphi_jt_1)(\chi\varphi_jt_2)=\sum_{j\in J} t_{j1}t_{j2}.$$ 
where we set $t_{jk}=(\chi\varphi_jt_k)$. 
Set $\alpha_{jk},k\in\{1,2\}$ a smooth function on $\mathbb{R}^d\setminus\{0\}$, $\alpha_{jk}=1 $ on $W_{jk}$, homogeneous of degree $0$ such that $\text{supp }(\alpha_{j1})\cap -\text{supp }(\alpha_{j2})=\emptyset$.    
We decompose the convolution product $I(\xi)=\int_{\mathbb{R}^d}d\eta t_{j1}(\xi-\eta)t_{j2}(\eta)$ into four parts:
\begin{eqnarray} 
I_1=\int_{\mathbb{R}^d}d\eta \alpha_{j1}\widehat{t_{j1}}(\xi-\eta)\alpha_{j2}\widehat{t_{j2}}(\eta) 
\\ I_2=\int_{\mathbb{R}^d}d\eta (1-\alpha_{j1})\widehat{t_{j1}}(\xi-\eta)\alpha_{j2}\widehat{t_{j2}}(\eta)  
\\ I_3=\int_{\mathbb{R}^d}d\eta \alpha_{j1}\widehat{t_{j1}}(\xi-\eta)(1-\alpha_{j2})\widehat{t_{j2}}(\eta) 
\\ I_4=\int_{\mathbb{R}^d}d\eta (1-\alpha_{j1})\widehat{t_{j1}}(\xi-\eta)(1-\alpha_{j2})\widehat{t_{j2}}(\eta) 
\end{eqnarray} 
We would like to estimate $I(\xi)$ for \textbf{arbitrary} $\xi$.
Let us first discuss the more singular term $I_1$. The key point is that
its integrand vanishes outside the domain
$\vert\eta\vert\leqslant \frac{\vert\xi\vert}{\sin\delta} $
for some $\delta$. 
Indeed, we observe that $\text{supp }\alpha_{j1} \cap -\text{supp }\alpha_{j2}=\emptyset$ means that for any $(\zeta_1,\zeta_2) \in \text{supp }\alpha_{j1} \times \text{supp }\alpha_{j2}$, 
the angle $\theta$ between $\zeta_1$ and $\zeta_2$ is less than $\pi-\delta$ for a given $\delta>0$. 

Hence if $\zeta_1=\xi-\eta\in \text{supp }\alpha_{j1}$  and $\zeta_2=\eta \in  \text{supp }\alpha_{j2}$
the angle between
$\zeta_1$ and $\zeta_2$ is bounded from below: 
$$\vert \zeta_1+\zeta_2 \vert^2=\left\langle\zeta_1+\zeta_2 ,\zeta_1+\zeta_2  \right\rangle =\vert\zeta_1\vert^2+\vert\zeta_2\vert^2 + 2\cos\theta \vert \zeta_1\vert\vert\zeta_2\vert $$ 
$$=(\vert\zeta_1\vert+\cos\theta\vert\zeta_2\vert)^2 + \sin^2\theta\vert\zeta_2\vert^2 \geqslant \sin^2\theta\vert\zeta_2\vert^2\geqslant \sin^2\delta\vert\zeta_2\vert^2, $$
hence $\vert\sin\delta\vert \vert\eta\vert\leqslant \vert \xi\vert $ 
and 
$\vert\sin\delta\vert \vert\xi-\eta\vert\leqslant  \vert\xi\vert$ by symmetry between $\zeta_1,\zeta_2$.
Thus
$$\vert I_1\vert\leqslant \int_{\vert \xi\vert \geqslant \vert\sin \delta\vert \vert\eta\vert} d\eta \Vert \theta^{-m_1} \widehat{t_{j1}}\Vert_{L^\infty} \Vert \theta^{-m_2} \widehat{t_{j2}}\Vert_{L^\infty}(1+\vert\xi-\eta\vert)^{m_1}(1+\vert\eta\vert)^{m_2}$$
if $\vert\xi\vert$ is fixed we integrate a rational function over a ball
$$\vert I_1\vert\leqslant \vert\sin\delta\vert^{-m_1-m_2} \Vert \theta^{-m_1} \widehat{t_{j1}}\Vert_{L^\infty} \Vert \theta^{-m_2} \widehat{t_{j2}}\Vert_{L^\infty}  (1+\vert\xi\vert)^{m_1+m_2} \int_0^{\frac{\vert\xi\vert}{\vert\sin\delta\vert}} r^{d-1}dr$$
 $$ \leqslant C_1\Vert \theta^{-m_1} \widehat{t_{j1}}\Vert_{L^\infty} \Vert \theta^{-m_2} \widehat{t_{j2}}\Vert_{L^\infty} (1+\vert\xi\vert)^{m_1+m_2+d}  $$
where $C_1=\frac{2\pi^{\frac{d}{2}}}{\Gamma(\frac{d}{2})}\vert(\sin\delta)^{-d-m_1-m_2}\vert$
does not depend
on $t_1,t_2$.
We have estimated the more singular term, set $\text{supp }(1-\alpha_{jk})=V_{jk}$, we choose $\alpha_{jk}$ in such a way that $V_{jk}=\overline{W^c_{jk}}$. The estimation of others terms is simple and
relies on the 
key inequalities $ \frac{(1+\vert \eta\vert)}{(1+\vert\xi\vert)(1+\vert\xi-\eta\vert)}\leqslant
1$ and 
$\frac{(1+\vert\xi-\eta\vert)}{(1+\vert\xi\vert)(1+\vert \eta\vert)}\leqslant 1$.
We gather 
all results:
$$ I_1 \leqslant \frac{2\pi^{\frac{d}{2}}}{\Gamma(\frac{n}{2})}\Vert \theta^{-m_1} \widehat{t_{j1}}\Vert_{L^\infty} \Vert \theta^{-m_2} \widehat{t_{j2}}\Vert_{L^\infty}  (1+\vert\xi\vert)^{m_1+m_2+d}$$
$$ I_2 \leqslant \Vert t_{1}\chi \Vert_{m_2+d+1,V_{j1},\varphi_j} \Vert \theta^{-m_2} \widehat{t_{j2}}\Vert_{L^\infty} \int_{\mathbb{R}^d} d\eta (1+\vert\xi-\eta\vert)^{-(m_2+d+1)} (1+\vert \eta\vert)^{m_2}$$ 
$$\leqslant \Vert t_{1}\chi \Vert_{m_2+d+1,V_{j1},\varphi_j}  \Vert \theta^{-m_2} \widehat{t_{j2}}\Vert_{L^\infty}(1+\vert\xi\vert)^{m_2}  \int_{\mathbb{R}^d} d\eta \frac{(1+\vert \eta\vert)^{m_2}}{(1+\vert\xi\vert)^{m_2}(1+\vert\xi-\eta\vert)^{(m_2+d+1)}} $$
$$ I_3 \leqslant  \Vert \theta^{-m_1} \widehat{t_{j1}}\Vert_{L^\infty}\Vert t_{2}\chi \Vert_{m_1+d+1,V_{j2},\varphi_j}\int_{\mathbb{R}^d} d\eta (1+\vert\xi-\eta\vert)^{m_1} (1+\vert \eta\vert)^{-(m_1+d+1)}  $$
$$\leqslant \Vert \theta^{-m_1} \widehat{t_{j1}}\Vert_{L^\infty}\Vert t_{2}\chi \Vert_{m_1+d+1,V_{j2},\varphi_j}(1+\vert\xi\vert)^{m_1}\int_{\mathbb{R}^d} d\eta \frac{(1+\vert\xi-\eta\vert)^{m_1}}{(1+\vert\xi\vert)^{m_1}(1+\vert \eta\vert)^{m_1+d+1}}  $$
$$ I_4 \leqslant \Vert t_{1}\chi \Vert_{N_1,V_{j1},\varphi_j}\Vert t_{2}\chi \Vert_{N_2,V_{j2},\varphi_j}(1+\vert\xi\vert)^{-N_1}\int_{\mathbb{R}^d} d\eta   \frac{(1+\vert\xi\vert)^{N_1}}{(1+\vert\xi-\eta\vert)^{N_1}(1+\vert \eta\vert)^{N_2}} .$$
%$\frac{(1+\vert\xi\vert)^{N_1}}{(1+\vert\xi-\eta\vert)^{N_1}(1+\vert \eta\vert)^{N_2}}$
%The integral $\int_{\mathbb{R}^d} d\eta   \frac{(1+\vert\xi\vert)^{N_1}}{(1+\vert\xi-\eta\vert)^{N_1}(1+\vert \eta\vert)^{N_2}}$
%is finite if $N_2\geqslant N_1+d+1$
%since by the triangular inequality:
%$$\int_{\mathbb{R}^d} d\eta   \frac{(1+\vert\xi\vert)^{N_1}}{(1+\vert\xi-\eta\vert)^{N_1}(1+\vert \eta\vert)^{N_2}}\leqslant \int_{\mathbb{R}^d} d\eta   \frac{((1+\vert\xi-\eta\vert)+(1+\vert\eta\vert))^{N_1}}{(1+\vert\xi-\eta\vert)^{N_1}(1+\vert \eta\vert)^{N_2}}$$
%then by the inequality $\forall (a,b)>1, (a+b)^{N_1}\leqslant (ab)^{N_1}$,
%$$\leqslant  \int_{\mathbb{R}^d} d\eta   \frac{(1+\vert\xi-\eta\vert)^{N_1}(1+\vert\eta\vert)^{N_1}}{(1+\vert\xi-\eta\vert)^{N_1}(1+\vert \eta\vert)^{N_2}}$$ 
%which finally gives: 
%$$=\int_{\mathbb{R}^d} d\eta   \frac{(1+\vert\eta\vert)^{N_1}}{(1+\vert \eta\vert)^{N_2}}=
%\int_{\mathbb{R}^d} d\eta (1+\vert\eta\vert)^{N_1-N_2}<+\infty.$$
We write the estimates 
in a more compact form 
where we replaced 
the integrals by constants $(C_i)_{1\leqslant i\leqslant 4}$:
\begin{eqnarray}
I_1 \leqslant C_1\Vert \theta^{-m_1} \widehat{t_{j1}}\Vert_{L^\infty} \Vert \theta^{-m_2} \widehat{t_{j2}}\Vert_{L^\infty}  (1+\vert\xi\vert)^{m_1+m_2+d}\\
I_2 \leqslant C_2 \Vert t_{1}\chi \Vert_{m_2+d+1,V_{j1},\varphi_j}  \Vert \theta^{-m_2} \widehat{t_{j2}}\Vert_{L^\infty}(1+\vert\xi\vert)^{m_2}\\      
I_3   \leqslant C_3 \Vert \theta^{-m_1} \widehat{t_{j1}}\Vert_{L^\infty}\Vert t_{2}\chi \Vert_{m_1+d+1,V_{j2},\varphi_j}(1+\vert\xi\vert)^{m_1}\\
I_4 \leqslant C_4 \Vert t_{1}\chi \Vert_{N_1,V_{j1},\varphi_j}\Vert t_{2}\chi \Vert_{N_2,V_{j2},\varphi_j}(1+\vert\xi\vert)^{-N_1}
\end{eqnarray}
then we summarize the whole estimate, if $N_2\geqslant N_1+d+1$:  
$$(1+\vert\xi\vert)^{-m_1-m_2-d}\vert I\vert$$ 
$$\leqslant C  \left(\Vert \theta^{-m_1} \widehat{t_{j1}}\Vert_{L^\infty} +\Vert t_{1}\chi \Vert_{N_1,V_{j1},\varphi_j}\right)\left(\Vert \theta^{-m_2} \widehat{t_{j2}}\Vert_{L^\infty}+\Vert t_{2}\chi \Vert_{N_2,V_{j2},\varphi_j} \right) .$$  
\end{proof}

\begin{lemm}
Let $\Gamma_1,\Gamma_2$ be two cones in $T^\bullet \mathbb{R}^d$ and
$m_1,m_2$ some non negative integers.
Assume $\Gamma_1\cap -\Gamma_2=\emptyset$.  
Set $\Gamma=\Gamma_1\cup\Gamma_2\cup\Gamma_1+\Gamma_2$.
Then for all seminorm $\Vert .\Vert_{N,V,\chi^2}$ of 
$\mathcal{D}^\prime_\Gamma$ where
$N\geqslant \sup_{k=1,2} m_k+d+1$, there exists $C$ such that  
for all $(t_1,t_2)\in \mathcal{D}_{\Gamma_1}^\prime(\mathbb{R}^d)\times \mathcal{D}_{\Gamma_2}^\prime(\mathbb{R}^d)$
satisfying $\Vert \theta^{-m_1}\widehat{t_1\chi}\Vert_{L^\infty}<\infty, \Vert\theta^{-m_2}\widehat{t_2\chi}\Vert_{L^\infty}<\infty$,
we have the bilinear estimate:
$$\Vert t_1t_2\Vert_{N,V,\chi^2} 
 \leqslant C \sum_{j\in J} \Vert t_2\chi \Vert_{2N,V_{j2},\varphi_j} \Vert\theta^{-m_1}\widehat{t_1\varphi_j\chi}\Vert_{L^\infty} $$ $$ + \Vert t_1\chi \Vert_{2N,V_{j1},\varphi_j}  \Vert\theta^{-m_2}\widehat{t_2\varphi_j\chi}\Vert_{L^\infty}+\Vert t_1 \Vert_{2N,V_{j1},\varphi_j} \Vert t_2 \Vert_{N,V_{j2},\varphi_j}$$
for some seminorms $\Vert . \Vert_{N,V_{jk},\varphi_j}$ of $\mathcal{D}^\prime_{\Gamma_k},k=1,2$.
\end{lemm}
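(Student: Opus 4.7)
\emph{Plan.} The strategy is to refine the proof of Lemma \ref{lemm1}, exploiting the extra information that $\xi$ lies in a cone $V$ separated from $\Gamma=\Gamma_1\cup\Gamma_2\cup(\Gamma_1+\Gamma_2)$. First I would apply Lemmas \ref{approxlemma1} and \ref{lemm0} to obtain a locally finite partition of unity $(\varphi_j^2)_{j\in J}$ and closed cones $(W_{j1},W_{j2})$ with $\Gamma_k\subset\bigcup_j\supp(\varphi_j)\times W_{jk}$ and $W_{j1}\cap -W_{j2}=\emptyset$. The new point is to shrink the $W_{jk}$ tightly enough so that for every index $j$ with $\varphi_j\chi\not\equiv 0$, the cones $W_{j1}$, $W_{j2}$ and the algebraic sum $W_{j1}+W_{j2}$ are all angularly disjoint from $V$. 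This is possible because $(\supp(\chi)\times V)\cap(\Gamma_1\cup\Gamma_2\cup(\Gamma_1+\Gamma_2))=\emptyset$, so by the same compactness/extraction argument as in Lemma \ref{approxlemma1} the separation propagates to any sufficiently fine conic cover.

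With this cover fixed, I decompose $t_1t_2\chi^2=\sum_j t_{j1}t_{j2}$ with $t_{jk}=\chi\varphi_j t_k$, and for each $j$ choose smooth degree-zero cutoffs $\alpha_{jk}$ equal to $1$ on $W_{jk}$ and supported in slightly enlarged cones $W_{jk}'$ still satisfying $W_{j1}'\cap -W_{j2}'=\emptyset$ and $V\cap(W_{j1}'\cup W_{j2}'\cup(W_{j1}'+W_{j2}'))=\emptyset$. Splitting the convolution $\widehat{t_{j1}}\star\widehat{t_{j2}}$ according to $\alpha_{jk}$ versus $1-\alpha_{jk}$ produces the same four pieces $I_1,I_2,I_3,I_4$ as in Lemma \ref{lemm1}. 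For $I_4$ both factors have rapid decay of arbitrary order on $V_{jk}=\supp(1-\alpha_{jk})$, and a Peetre-type inequality $(1+|\xi|)^N\leqslant(1+|\xi-\eta|)^N(1+|\eta|)^N$ combined with an extra $d+1$ powers for integrability gives the term $\Vert t_1\Vert_{2N,V_{j1},\varphi_j}\Vert t_2\Vert_{N,V_{j2},\varphi_j}$. For $I_3$ (resp.\ $I_2$) one factor decays rapidly while the other has only polynomial growth of order $m_1$ (resp.\ $m_2$); however $V\cap W_{j1}'=\emptyset$ (resp.\ $V\cap W_{j2}'=\emptyset$) together with the elementary identity $a^2+b^2-2ab\cos\theta=(a-b\cos\theta)^2+b^2\sin^2\theta$ used in Lemma \ref{lemm1} forces $|\xi-\eta|\geqslant c|\xi|$ (resp.\ $|\eta|\geqslant c|\xi|$) on the support of the integrand, producing the first two terms $\Vert t_2\chi\Vert_{2N,V_{j2},\varphi_j}\Vert\theta^{-m_1}\widehat{t_1\varphi_j\chi}\Vert_{L^\infty}$ and $\Vert t_1\chi\Vert_{2N,V_{j1},\varphi_j}\Vert\theta^{-m_2}\widehat{t_2\varphi_j\chi}\Vert_{L^\infty}$.

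The main obstacle is $I_1$, where \emph{both} Fourier factors are only of polynomial orders $m_1$ and $m_2$, so no individual factor can furnish any decay in $|\xi|$. The resolution is geometric rather than analytic: since the integrand of $I_1$ is supported in $\{(\xi,\eta):\xi-\eta\in\supp\alpha_{j1},\ \eta\in\supp\alpha_{j2}\}$, one has $\xi\in\supp\alpha_{j1}+\supp\alpha_{j2}\subset W_{j1}'+W_{j2}'$, a conic set arranged to be angularly disjoint from $V$. Hence $I_1(\xi)\equiv 0$ for $\xi\in V$, which is precisely why only three terms survive in the final bound and why the hypothesis $\Gamma_1+\Gamma_2\subset\Gamma$ on the target wave front set is \emph{necessary}, not merely sufficient. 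Summing over the finitely many $j$ for which $\varphi_j\chi\not\equiv 0$ yields the claimed bilinear estimate with a constant depending on $N$, $V$, $\chi$, $m_1$, $m_2$ and the covering, but independent of $t_1,t_2$; combined with Theorem \ref{boundfourier} this will in the sequel upgrade the pointwise product to a bounded bilinear map $\mathcal{D}^\prime_{\Gamma_1}\times\mathcal{D}^\prime_{\Gamma_2}\to\mathcal{D}^\prime_{\Gamma}$.
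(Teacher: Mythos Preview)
Your proposal is correct and follows essentially the same route as the paper: refine the conic cover so that $V$ is disjoint from each $W_{j1}\cup W_{j2}\cup(W_{j1}+W_{j2})$, split the convolution into the four pieces $I_1,\dots,I_4$, observe that $I_1(\xi)$ vanishes identically for $\xi\in V$ by the support constraint $\xi\in W_{j1}'+W_{j2}'$, and handle $I_2,I_3$ via the angular separation (cosine-law) trick and $I_4$ via the Peetre inequality. The only cosmetic discrepancy is in the labeling of which of $I_2,I_3$ yields the $|\xi-\eta|\geqslant c|\xi|$ versus $|\eta|\geqslant c|\xi|$ bound (the paper does a change of variable $\eta\mapsto\xi-\eta$ in $I_3$ to make it symmetric to $I_2$), but this does not affect the argument.
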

\begin{proof}
Let $V$ be a closed cone of $\mathbb{R}^d$ such that
$\text{supp }\chi\times V$ does not meet $\Gamma_1\cup\Gamma_2\cup\Gamma_1+\Gamma_2$.
Now, it is always possible to use the cover 
given by the approximation lemma \emph{fine enough} 
so that for all $j\in J$, $V$ will not meet $W_{j1}\cup W_{j2}\cup (W_{j1}+W_{j2})$. 
We would like to estimate $I(\xi)$ for $\xi \notin W_{j1}\cup W_{j2}\cup (W_{j1}+W_{j2}) $.
But $\alpha_{j2}(\eta)\alpha_{j1}(\xi-\eta)\neq 0 \implies (\eta,\xi-\eta) \in W_{j2}\times W_{j1}\implies \xi=(\xi-\eta) + \eta\in W_{j1}+W_{j2}$. Thus if $\xi\notin W_{j1}+W_{j2}$ then $\alpha_{j2}(\eta)\alpha_{j1}(\xi-\eta)=0$ for all $\eta$, 
hence $I_1(\xi)=0$ when $\xi\in V$. 
We set $\text{supp }(1-\alpha_{jk})=V_{jk}$ which is a cone in which $t_{jk}$ decreases faster than any inverse of polynomial function. 
By definition:
$$\vert(1-\alpha_{jk})\widehat{t}_{jk}\vert (\xi) \leqslant \Vert t_k\chi \Vert_{N,V_{jk},\varphi_j}\left(1 + \vert \xi\vert \right)^{-N} $$
also for $\alpha_{jk}\widehat{t}_{jk}$ where $t_{jk}=(t_k\chi)\varphi_j$, we have:
$$\vert\alpha_{jk}\widehat{t}_{jk} \vert(\xi)\leqslant \Vert (1+\vert\xi\vert)^{-m_k}\widehat{t_{jk}}\Vert_{L^\infty}(1+\vert\xi\vert)^{m_k} $$
where $m_k$ is the order of the compactly supported distribution $t_k\chi$.
We can estimate $I_4$ in a simple way:
$$\vert I_4\vert(\xi) \leqslant  \Vert t_1\chi \Vert_{2N,V_{j1},\varphi_j} \Vert t_2\chi \Vert_{N,V_{j2},\varphi_j}(1+\vert\xi\vert)^{-N} \int_{\mathbb{R}^d}d\eta\frac{\left(1 + \vert \xi \vert \right)^{N}}{\left(1 + \vert \xi-\eta \vert \right)^{2N}\left(1 + \vert \eta\vert \right)^{N}} $$ 
$$\vert I_4\vert(\xi) \leqslant C_N \Vert t_1\chi \Vert_{2N,V_{j1},\varphi_j} \Vert t_2\chi \Vert_{N,V_{j2},\varphi_j}(1+\vert\xi\vert)^{-N},  $$
where $C_N=\int_{\mathbb{R}^d}d\eta\frac{\left(1 + \vert \xi \vert \right)^{N}}{\left(1 + \vert \xi-\eta \vert \right)^{2N}\left(1 + \vert \eta\vert \right)^{N}}\leqslant \int_{\mathbb{R}^d}d\eta\left(1 + \vert \eta\vert \right)^{-N}$.
 
To estimate $I_2$, 
let us first notice that if $\alpha_{jk}$ were smooth at $0$ then we could identify the ``good function'' $(1-\alpha_{j1})\widehat{t}_{j1}(\eta)$ with the Fourier transform of a Schwartz function and ''the bad function'' $\alpha_{j2}\widehat{t}_{j2}(\eta)$ with the Fourier transform of a distribution.
%$$I_2(\xi)=\widehat{\mathcal{F}^{-1}(1-\alpha_{j1})t_{j1}} \star \widehat{\mathcal{F}^{-1}(\alpha_{j2}t_{j2})}=\widehat{\mathcal{F}^{-1}(1-\alpha_{j1})t_{j1}\mathcal{F}^{-1}(\alpha_{j2}t_{j2})} $$
%and this kind of estimate has been already dealt in the estimation of $\widehat{t\varphi}$ (see Chapter $3$).
%% To get back to the setting of our previous proof, we will multiply $\alpha(\xi)$ by a function $\chi(\vert\xi \vert)=1,\vert\xi\vert\geqslant 2 ,\chi(\vert\xi\vert)=0, \vert\xi\vert\leqslant 1$.  
%So we use the same idea as in the previous estimate and 
Denoting by $\theta(\xi,\eta)$ the angle between $\xi$ and $\eta$, we cut $I_2$ into two parts:$$I_2(\xi)=\int_{\theta(\xi,\eta)\leqslant \delta}(1-\alpha_{j1})\widehat{t}_{j1}(\xi-\eta)\alpha_{j2}\widehat{t}_{j2}(\eta) + \int_{\theta(\xi,\eta)\geqslant \delta}(1-\alpha_{j1})\widehat{t}_{j1}(\xi-\eta)\alpha_{j2}\widehat{t}_{j2}(\eta) $$
We set the cone $W_{kj}^\prime=\{\xi \vert \text{dist }(\frac{\xi}{\vert\xi\vert},W_{kj})\leqslant\delta \} $
for some $\delta>0$ in such a way that
the following sequence of inclusions holds:  
$$W_{kj} \subset \text{supp }\alpha_{jk} \subset W_{kj}^\prime .$$   
The restrictions $\xi\in V,\eta \in \text{supp }\alpha_{j2}$ impose the angle $\theta(\xi,\eta)$ between them satisfies the bound $\theta\geqslant dist(V\cap\mathbb{S}^{d-1}, \text{supp }\alpha_{j2}\cap\mathbb{S}^{d-1})>0,$
hence if $ \delta < dist(V\cap\mathbb{S}^{d-1},W_{j2}\cap\mathbb{S}^{d-1})$
then 
$$\forall \xi \in V, I_2(\xi)=\int_{\theta(\xi,\eta)\geqslant \delta}(1-\alpha_{j1})\widehat{t}_{j1}(\xi-\eta)\alpha_{j2}\widehat{t}_{j2}(\eta) ,$$
but the estimate $\theta(\xi,\eta)\geqslant \delta$ exactly means 
that the angle between $\xi,\eta$ is bounded
from below hence we use the bounds
$$\vert\xi-\eta  \vert\geqslant \sin\delta \vert \xi\vert, \vert\xi-\eta  \vert\geqslant \sin\delta\vert \eta\vert  $$
which implies $$(1+\vert\xi-\eta  \vert)^{-2N} \leqslant (1+ \sin\delta\vert \xi\vert )^{-N} (1+ \sin\delta\vert \eta\vert )^{-N}\leqslant (\sin\delta)^{-2N}(1+\vert\xi\vert)^{-N}(1+\vert\eta\vert)^{-N} $$
which implies the following bounds for $I_2$:
$$\forall\xi\in V, \vert I_2\vert(\xi)$$ 
$$ \leqslant \int_{\theta(\xi,\eta)\geqslant \delta} d\eta \Vert t_1\chi \Vert_{2N,V_{j1},\varphi_j} (1+\vert\xi-\eta  \vert)^{-2N} \Vert \theta^{-m_2}\widehat{t_{j2}}\Vert_{L^\infty} (1+\vert \eta\vert)^{m_2}$$
$$\leqslant \Vert t_1\chi \Vert_{2N,V_{j1},\varphi_j} \Vert \theta^{-m_2}\widehat{t_{j2}}\Vert_{L^\infty}(1+\vert \xi\vert )^{-N}\vert\sin\delta\vert^{-2N}\int_{\mathbb{R}^d}d\eta (1+\vert \eta\vert )^{-N} (1+\vert \eta\vert)^{m_2}.$$
Provided
that 
$dist(V\cap\mathbb{S}^{d-1}, W_{j2}\cap\mathbb{S}^{d-1}) >\delta>0$ 
and $N\geqslant m_2+d+1$, 
the integral on the right hand side absolutely converges.
Setting $C_2=\vert\sin\delta\vert^{-2N}\int_{\mathbb{R}^d}d\eta (1+\vert \eta\vert )^{-N} (1+\vert \eta\vert)^{m_2}$ yields the estimate
$$\forall\xi\in V, \vert I_2\vert(\xi) \leqslant C_2\Vert t_1\chi \Vert_{2N,V_{j1},\varphi_j}  \Vert \theta^{-m_2}\widehat{t_{j2}}\Vert_{L^\infty}(1+\vert \xi\vert )^{-N}.$$
Now for $I_3(\xi)$, after the variable change $$\int_{\mathbb{R}^d}d\eta \vert\alpha_{j1}t_{j1}(\xi-\eta) (1-\alpha_{j2})t_{j2}(\eta)\vert=\int_{\mathbb{R}^d}d\eta \vert\alpha_{j1}t_{j1}(\eta) (1-\alpha_{j2})t_{j2}(\xi-\eta)\vert ,$$ we repeat the exact same proof as above with the roles of the indices $1,2$ exchanged.  
$$\forall\xi\in V, \vert I_3\vert(\xi) \leqslant C_{3} \Vert t_2\chi \Vert_{2N,V_{j2},\varphi_j}  \Vert \theta^{-m_1}\widehat{t_{j1}}\Vert_{L^\infty}(1+\vert \xi\vert )^{-N}$$
where $C_{3}=\vert\sin\delta\vert^{-2N}\int_{\mathbb{R}^d}d\eta (1+ \vert\eta\vert )^{-N} (1+\vert \eta\vert)^{m_1}$.
Gathering the three terms, we obtain:
$$\forall\xi\in V, \vert I\vert(\xi)\leqslant C( \Vert t_2\chi \Vert_{2N,V_{j2},\varphi_j} \Vert \theta^{-m_1}\widehat{t_{j1}}\Vert_{L^\infty} 
$$ $$+ \Vert t_1\chi \Vert_{2N,V_{j1},\varphi_j} \Vert \theta^{-m_2}\widehat{t_{j2}}\Vert_{L^\infty} +\Vert t_1\chi \Vert_{2N,V_{j1},\varphi_j} \Vert t_2\chi \Vert_{N,V_{j2},\varphi_j} ) (1+\vert\xi\vert)^{-N} .$$
\end{proof}
Let us explain the boundedness properties
of the product.
Let $B_k,k\in\{1,2\}$ be bounded subsets of $\mathcal{D}_{\Gamma_k}^\prime(\mathbb{R}^d),k\in\{1,2\}$.
Then for each $V$ satisfying the hypothesis of the lemma for each $\chi$, there exists a pair $(m_1,m_2)$ such that the r.h.s. of the bilinear estimate
is bounded for all $t_1,t_2$ describing $B_1\times B_2$ by theorem (\ref{vouluChristian}). 
Thus the seminorm $\Vert t_1t_2\Vert_{N,V,\chi^2}$ 
is bounded for all $t_1,t_2\in B_1\times B_2$.
The joint 
and partial sequential continuity
of the product simply
follows 
from the above arguments.
As a corollary of the previous lemmas, we deduce the following important
\begin{thm}\label{productbounded}
Let $\Gamma_1,\Gamma_2$ be two cones in $T^\bullet \mathbb{R}^d$.
Assume $\Gamma_1\cap -\Gamma_2=\emptyset$.
Set $\Gamma=\left(\Gamma_1\cup\Gamma_2\cup(\Gamma_1+\Gamma_2)\right)$, where $x,\xi\in \Gamma_1+\Gamma_2$ means that $\xi=\xi_1+\xi_2$ for some $(x,\xi_1)\in\Gamma_1,(x,\xi_2)\in \Gamma_2$.
Then the product 
$$(t_1,t_2)\in D_{\Gamma_1}^\prime\times D_{\Gamma_2}^\prime \mapsto t_1t_2\in D_{\Gamma}^\prime $$
is well defined and \textbf{bounded}.
\end{thm}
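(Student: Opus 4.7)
The theorem will follow by packaging the two bilinear estimates already established in Lemma \ref{lemm1} and the second bilinear lemma. My plan is to establish in turn: (i) $t_1 t_2$ is a well-defined distribution, (ii) its wave front set lies in $\Gamma$, and (iii) the bilinear assignment $(t_1,t_2) \mapsto t_1 t_2$ sends bounded sets to bounded sets.

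First I would verify that $t_1 t_2 \in \mathcal{D}^\prime(\mathbb{R}^d)$ is meaningful. For any cut-off $\chi \in \mathcal{D}(\mathbb{R}^d)$, Theorem \ref{boundfourier} applied to the individual distributions $t_k$ produces integers $m_k$ such that $\Vert \theta^{-m_k}\widehat{t_k\chi\varphi_j}\Vert_{L^\infty} < \infty$ on every piece $\varphi_j$ of the partition from Lemma \ref{lemm0}. Lemma \ref{lemm1} then furnishes a finite bound on $\Vert \theta^{-(m_1+m_2+d)}\widehat{t_1 t_2 \chi^2}\Vert_{L^\infty}$, and the converse direction of Theorem \ref{boundfourier} upgrades this polynomial bound on the Fourier side into the statement that $t_1 t_2 \chi^2$ is a genuine distribution. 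Letting $\chi$ range over a partition of unity glues these local pieces together into $t_1 t_2 \in \mathcal{D}^\prime(\mathbb{R}^d)$.

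Next I would show $WF(t_1 t_2) \subset \Gamma$. Pick $(x_0,\xi_0) \notin \Gamma$; since $\Gamma_1$, $\Gamma_2$ and $\Gamma_1 + \Gamma_2$ are closed conic sets, one can select $\chi$ with $\chi(x_0) \neq 0$ and a closed conic neighborhood $V$ of $\xi_0$ such that $\mathrm{supp}\,\chi \times V$ is disjoint from each of these three sets. The second bilinear lemma then yields $\Vert t_1 t_2 \Vert_{N, V, \chi^2} < \infty$ for every $N$, i.e. $\widehat{t_1 t_2 \chi^2}$ is rapidly decreasing on $V$, which by H\"ormander's criterion means $(x_0,\xi_0) \notin WF(t_1 t_2)$. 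For the boundedness, fix bounded $B_k \subset \mathcal{D}^\prime_{\Gamma_k}$; every seminorm defining the topology of $\mathcal{D}^\prime_\Gamma$ is either a weak-$\mathcal{D}^\prime$ seminorm or one of the microlocal seminorms $\Vert \cdot \Vert_{N,V,\chi^2}$. The weak seminorms are uniformly controlled because Theorem \ref{boundfourier} supplies integers $m_1,m_2$ independent of $t_k \in B_k$ for each fixed $\chi$, whence Lemma \ref{lemm1} bounds $\Vert \theta^{-(m_1+m_2+d)}\widehat{t_1 t_2 \chi^2}\Vert_{L^\infty}$ uniformly on $B_1 \times B_2$. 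The microlocal seminorms are uniformly controlled because the right-hand side of the second lemma is a finite sum of products of seminorms of $\mathcal{D}^\prime_{\Gamma_k}$ and weak-$\mathcal{D}^\prime$ seminorms of the factors, each uniformly bounded on the corresponding $B_k$ by hypothesis.

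The genuinely delicate ingredient has already been absorbed into Lemma \ref{lemm1}, namely the treatment of the piece $I_1$ where both factors are microlocally singular. The hypothesis $\Gamma_1 \cap -\Gamma_2 = \emptyset$ is precisely what forces the integrand to be supported in the region $|\eta| \leqslant |\xi|/\sin\delta$, which converts a potentially divergent convolution into a mere polynomial growth in $\xi$ (with exponent $m_1 + m_2 + d$ coming from the ball-volume estimate). The pieces $I_2$ and $I_3$ require the additional angular separation trick to bound convolutions of a Schwartz-like factor against a polynomially-growing one, and $I_4$ is immediate from classical convolution inequalities. Once those microlocal splittings are in place, the assembly into Theorem \ref{productbounded} is bookkeeping: well-definedness from Lemma \ref{lemm1}, wave front localization from the second lemma, and uniform boundedness from the fact that all the constants in both lemmas are intrinsic to $\chi, V, N$ and independent of the particular distributions.
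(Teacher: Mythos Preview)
Your proposal is correct and follows essentially the same route as the paper: the theorem is assembled as a corollary of Lemma~\ref{lemm1} (controlling the weak-$\mathcal{D}'$ seminorms via the polynomial Fourier bound) and the second bilinear lemma (controlling the microlocal seminorms $\Vert\cdot\Vert_{N,V,\chi^2}$), with Theorem~\ref{boundfourier} converting between weak boundedness and the uniform polynomial Fourier bounds. Your write-up is in fact more explicit than the paper's, which simply states the theorem as a corollary after the paragraph explaining how the two lemmas combine.
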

\subsection{The soft landing condition is stable by sum.}
We have studied the boundedness properties of the product. The main theorem of Chapter $3$ singled out an essential property of the wave front set of distributions which was the \textbf{soft landing condition}. Our goal in this subsection will be to check that this condition on wave front sets is \emph{stable} by products. If $WF(t_i)_{\in\{1,2\}}$ satisfies the soft landing condition and $WF(t_1)\cap (-WF(t_2))=\emptyset$ on $M\setminus I$, then what happens to $WF(t_1t_2)$ ?
%Recall that $\Gamma_i,i\in\{1,2\}$ satisfies the
%\textbf{soft landing condition}
%if for all $\rho$-convex compact set $K$,
%\begin{equation}\label{Broudermetric}
%\forall i\in \{1,2\},\exists\varepsilon_i>0,\exists\delta_i>0, \Gamma_i|_{K\cap\vert h\vert\leqslant \varepsilon}\subset\{\vert k\vert\leqslant \delta\vert h\vert\vert\xi\vert \}
%\end{equation}   
\begin{prop}\label{sumstable}
Let $\Gamma_1,\Gamma_2$ be two closed conic sets which both satisfy the soft landing condition
and $\Gamma_1,\Gamma_2$ are such that $\Gamma_1\cap (-\Gamma_2)=\emptyset$.
Then the cone $\Gamma_1\cup \Gamma_2\cup \Gamma_1+\Gamma_2$ satisfies the \textbf{soft landing condition}.
\end{prop}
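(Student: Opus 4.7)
The plan is to verify the soft landing condition for $\Gamma=\Gamma_1\cup\Gamma_2\cup(\Gamma_1+\Gamma_2)$ pointwise at each $p\in I$, working in a local chart $(x,h)$ where $I=\{h=0\}$ and $\rho=h^j\partial_{h^j}$. The case of the union is immediate: picking common constants $\varepsilon=\min(\varepsilon_1,\varepsilon_2)$ and $\delta=\max(\delta_1,\delta_2)$ from the individual soft landing estimates produces a single bound $\vert k\vert\leqslant\delta\vert h\vert\vert\xi\vert$ valid for every element of $\Gamma_1\cup\Gamma_2$ with $\vert h\vert\leqslant\varepsilon$. Hence the content of the proof lies in the Minkowski sum $\Gamma_1+\Gamma_2$.

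For a point $(x,h;k,\xi)=(x,h;k_1+k_2,\xi_1+\xi_2)$ in $\Gamma_1+\Gamma_2$ with $(x,h;k_i,\xi_i)\in\Gamma_i$, combining the soft landing bounds on each piece with the triangle inequality yields
$$\vert k\vert\leqslant\vert k_1\vert+\vert k_2\vert\leqslant\delta\vert h\vert\bigl(\vert\xi_1\vert+\vert\xi_2\vert\bigr).$$
Therefore the whole result reduces to producing a uniform constant $M$ such that $\vert\xi_1\vert+\vert\xi_2\vert\leqslant M\vert\xi_1+\xi_2\vert$ for every such decomposition near $p$, since then $\delta'=\delta M$ is an admissible soft landing constant for $\Gamma_1+\Gamma_2$. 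Via the identity $\vert\xi_1+\xi_2\vert^{2}=\vert\xi_1\vert^{2}+\vert\xi_2\vert^{2}+2\langle\xi_1,\xi_2\rangle$, this is equivalent to keeping the angle between $\xi_1$ and $\xi_2$ bounded uniformly away from $\pi$.

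I would establish the angle bound by contradiction with compactness, using the freedom to work in arbitrarily small neighborhoods of $I$. If no such $M$ existed, one could find sequences of base points $(x_n,h_n)$ with $\vert h_n\vert\leqslant 1/n$ and decompositions $(k_n^i,\xi_n^i)$ with $(x_n,h_n;k_n^i,\xi_n^i)\in\Gamma_i$, normalized so that $\vert\xi_n^1\vert+\vert\xi_n^2\vert=1$ yet $\vert\xi_n^1+\xi_n^2\vert\to 0$. The individual soft landing estimates give $\vert k_n^i\vert\leqslant\delta\vert h_n\vert\to 0$, so extracting convergent subsequences produces limit points $(x_\ast,0;0,\xi_i^\ast)$ with $\xi_1^\ast=-\xi_2^\ast$ and $\vert\xi_1^\ast\vert=\vert\xi_2^\ast\vert=\tfrac{1}{2}$. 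By closedness of $\Gamma_1,\Gamma_2$ in $T^\bullet M$, these limits lie in $\Gamma_1$ and $\Gamma_2$ respectively, so $(x_\ast,0;0,\xi_1^\ast)\in\Gamma_1\cap(-\Gamma_2)$, contradicting the standing hypothesis.

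The main obstacle is making this compactness argument rigorous, and it hinges on the interpretation of the antipodal condition: one must read $\Gamma_1\cap(-\Gamma_2)=\emptyset$ as holding on the full cotangent space $T^\bullet M$ including the fibers over $I$, consistent with $\Gamma_i$ being closed there and with the fact that the soft landing condition already forces $\overline{\Gamma_i}|_I\subset C$. The crucial gain from allowing $\vert h_n\vert\to 0$ rather than fixing a single $\varepsilon'$ is that the soft landing bound $\vert k_n^i\vert\leqslant\delta\vert h_n\vert\vert\xi_n^i\vert$ then collapses both $k_1^\ast$ and $k_2^\ast$ to zero; had the limit accumulated at $h_\ast\neq 0$ one would only obtain $\xi_1^\ast=-\xi_2^\ast$ with no control on $k_1^\ast+k_2^\ast$, which is too weak to invoke the antipodal hypothesis. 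Once the limit is forced onto $I$ the argument closes cleanly, giving a finite $M$ and hence the soft landing condition for $\Gamma_1+\Gamma_2$.
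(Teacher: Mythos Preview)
Your proof is correct but proceeds differently from the paper. Both arguments begin identically, reducing to the Minkowski sum and obtaining $\vert k_1+k_2\vert\leqslant\delta\vert h\vert(\vert\xi_1\vert+\vert\xi_2\vert)$ from the individual soft landing bounds. The divergence is in how the factor $\vert\xi_1\vert+\vert\xi_2\vert$ is controlled. The paper translates $\Gamma_1\cap(-\Gamma_2)=\emptyset$ into a quantitative bound on the \emph{full} covectors, $\delta'(\vert\eta_1\vert+\vert\eta_2\vert)\leqslant\vert\eta_1+\eta_2\vert$ with $\eta_i=(k_i,\xi_i)$, which only gives $\vert\xi_1\vert+\vert\xi_2\vert\leqslant\frac{1-\delta'}{\delta'}\vert k_1+k_2\vert+\frac{1}{\delta'}\vert\xi_1+\xi_2\vert$; feeding this back into the soft landing estimate produces a self-referential inequality for $\vert k_1+k_2\vert$ that is closed by an absorption argument after shrinking $\vert h\vert$. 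Your route instead establishes the sharper $\xi$-only bound $\vert\xi_1\vert+\vert\xi_2\vert\leqslant M\vert\xi_1+\xi_2\vert$ directly, via a contradiction that drives the base point onto $I$ so the soft landing condition itself annihilates the $k$-components in the limit. This avoids the algebraic bootstrap entirely and is arguably more transparent; the price is that the compactness argument must reach the fiber over $I$, which is exactly the interpretive point you flag about reading $\Gamma_1\cap(-\Gamma_2)=\emptyset$ in $T^\bullet M$. The paper's step~2 in fact relies on the same closure hypothesis (its uniform $\delta'$ over a compact base is obtained by an unstated compactness argument of the same kind), so you are not assuming anything the paper does not --- you are simply explicit about it.
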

\begin{proof}
We just have to prove that $\Gamma_1+\Gamma_2$ satisfies the soft landing condition because taken individually, $\Gamma_i,\in\{1,2\}$ already satisfy the soft landing condition.
We denote $(x_i,h_i;k_k,\xi_i)$ a point in $\Gamma_i,\in\{1,2\}$. We also denote $\eta_i=(k_i,\xi_i)$. In the course of the proof, 
we use the norm $\vert\eta\vert=\vert k\vert+\vert\xi\vert$
and the result 
does not depend on
the choice of
this norm since all norms
are equivalent.
\begin{enumerate}
\item We start from the hypothesis that $\Gamma_i,\in\{1,2\}$ both satisfy the soft landing condition
$$\forall i\in \{1,2\},\exists\varepsilon_i>0,\exists\delta_i>0, \Gamma_i|_{K\cap\vert h\vert\leqslant \varepsilon}\subset\{\vert k\vert\leqslant \delta\vert h\vert\vert\xi\vert \}$$
but this implies that for the points of the form $(x,h;\eta_1)+(x,h;\eta_2)=(x,h;\eta_1+\eta_2)\in (\Gamma_1+\Gamma_2)|_{(x,h)}$, we have the inequality
$$\vert k_1+k_2\vert\leqslant \sup_{\in\{1,2\}}\delta_i\vert h\vert\left(\vert\xi_1\vert+\vert \xi_2\vert\right) ,$$
from now on, we set $\sup_{\in\{1,2\}}\delta_i=\delta$.
\item In order to estimate the sum $\left(\vert\xi_1\vert+\vert \xi_2\vert\right) $, we will use the fact that $\Gamma_1\cap -\Gamma_2=\emptyset$. This can be translated in the estimate
$$\forall (x,h;\eta_i)\in \Gamma_i|_K,\exists \delta^\prime>0 , \delta^\prime \left(\vert\eta_1\vert+\vert \eta_2\vert \right)\leqslant \vert\eta_1 +\eta_2\vert $$
$$\implies \delta^\prime \left(\vert k_1\vert+\vert k_2\vert+\vert\xi_1\vert+\vert \xi_2\vert \right)\leqslant \vert k_1+k_2\vert+ \vert \xi_1 + \xi_2\vert $$
$$\implies\vert\xi_1\vert+\vert \xi_2\vert \leqslant \frac{1-\delta^\prime}{\delta^\prime}\vert k_1+k_2\vert+ \frac{1}{\delta^\prime}\vert \xi_1 + \xi_2\vert  ,$$
where we can always assume we chose $\delta^\prime<1$.
\item Combining the two previous estimates, we obtain
$$\vert k_1\vert+\vert k_2\vert\leqslant \delta\vert h\vert\left(\vert\xi_1\vert+\vert \xi_2\vert\right)  \leqslant \delta\vert h\vert \left(\frac{1-\delta^\prime}{\delta^\prime}\vert k_1+k_2\vert+ \frac{1}{\delta^\prime}\vert \xi_1 + \xi_2\vert \right) .$$
Now we choose $\varepsilon^\prime$ small enough in such a way that $\forall \vert h\vert\leqslant \varepsilon^\prime$ $0<\delta\varepsilon^\prime\frac{1-\delta^\prime}{\delta^\prime} <1$.
Then this implies the final estimate
$$\forall \vert h\vert\leqslant \varepsilon^\prime,  \vert k_1+ k_2\vert\leqslant \frac{\delta\vert h\vert}{\delta^\prime} (1-\delta\varepsilon^\prime\frac{1-\delta^\prime}{\delta^\prime} )^{-1}\vert\xi_1+\xi_2\vert $$
\end{enumerate}
which means $\Gamma_1+\Gamma_2$ satisfies the \textbf{soft landing condition}.
\end{proof}

\section{The pull-back by diffeomorphisms.}
Our goal in this part consists in studying the lift to $T^\star M$ of diffeomorphisms of $M$ fixing $I$ since the symplectomorphisms of $T^\star M$ will determine the action on wave front sets. In this section, we will work in a local chart of $M$ in $\mathbb{R}^{n+d}$ with coordinates $(x,h)$ where $I$ is given by the equation $\{h=0\}$.
\subsection{The symplectic geometry of the vector fields tangent to $I$ and of the diffeomorphisms leaving $I$ invariant.}
We will work at the infinitesimal level 
within the class $\mathfrak{g}$ 
of vector fields tangent to $I$ 
defined by H\"ormander (\cite{Hormander} vol 3 Lemma (18.2.5)). First recall 
their definition
in coordinates $(x,h)$ where $I=\{h=0\}$:  
the vector fields $X$ tangent to $I$ are of the form
$$ h^ja_j^i(x,h)\partial_{h^i} + b^i(x,h)\partial_{x^i} $$
and they form an infinite dimensional Lie algebra denoted by $\mathfrak{g}$ which is a Lie subalgebra of $Vect(M)$. 
Actually, these vector fields form a module over the ring $C^\infty(M)$ finitely generated by the vector fields $h^i\partial_{h^j},\partial_{x^i}$. This module was defined by Melrose and is associated to a vector bundle called the Tangent Lie algebroid of $I$. 
This module is naturally filtered by the vanishing order of the vector field on $I$.
\begin{defi}
Let $\mathcal{I}$ be the ideal of functions vanishing on $I$. For $k\in\mathbb{N}$,
let $F_k$ be the submodule of vector fields tangent to $I$ defined as follows, $X\in F_k$ if 
$X\mathcal{I}\subset\mathcal{I}^{k+1}$.
\end{defi}
This definition of the filtration is completely coordinate invariant. We also immediately have $F_{k+1}\subset F_k$. Note that $F_0=\mathfrak{g}$.
\subsubsection{Cotangent lift of vector fields.} 
 We recall the following fact, 
any vector field $X\in Vect(M)$
lifts functorially to a
\emph{Hamiltonian vector field} $X^\star\in Vect(T^\star M)$
(for more on Hamiltonian vector fields, see \cite{CanaWeinstein} $3.5$ page $14$) by the following procedure  
$$X=X^i\frac{\partial}{\partial z^i}\in Vect(M) \overset{\sigma}{\mapsto} \sigma(X)=X^i\xi_i\in C^\infty(T^\star M)$$ 
$$ \mapsto X^\star=\left\lbrace \sigma(X),.  \right\rbrace= X^i\frac{\partial}{\partial z^i}-\xi_i\frac{\partial X^i}{\partial z^i}\frac{\partial}{\partial \xi_i},$$
where $\left\lbrace .,.  \right\rbrace $ is the Poisson bracket of $T^\star M$. 
Notice the projection on $M$ of $X^\star$ is $X$ and $X^\star$ is \textbf{linear} in the cotangent fibers.
This means the action of vector fields 
is lifted to an action by
Hamiltonian symplectomorphisms of $T^\star M$. 
The map $X\in\mathfrak{g}\mapsto \sigma(X)\in C^\infty(T^\star M)$ 
from the Lie algebra $\mathfrak{g}$ 
to the \emph{Poisson ideal} $\mathcal{I}_{(TI)^\perp}\subset C^\infty(T^\star M)$ 
can be interpreted as a
``universal'' \textbf{moment map} 
in Poisson geometry since to each element $X$ of the Lie algebra $\mathfrak{g}$ which acts symplectically as a vector field $X^\star\in Vect(T^\star M)$,
we associate a function which is the Hamiltonian of $X^\star$ 
(as explained to us by Mathieu Sti\'enon).
\begin{lemm}\label{lemmlift}
Let $X$ be a vector field in $\mathfrak{g}$.
Then $X^\star$ is tangent to the conormal $(TI)^\perp$ of $I$ and the symplectomorphism $e^{X^\star}$ leaves the conormal \textbf{globally invariant}. In particular,
%\item If $X\in F_1$ then the vector $X^\star$ fixes $T^\star M|_I$ and is vertical over $I$.
if $X\in F_2$, then $X^\star$ vanishes on the conormal $(TI)^\perp$ of $I$ and $(TI)^\perp$ is contained in the set of fixed points of the symplectomorphism $e^{X^\star}$.
\end{lemm}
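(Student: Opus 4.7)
The plan is to verify both assertions by direct computation in a local chart adapted to $I$. We choose coordinates $(x,h)$ so that $I=\{h=0\}$, and corresponding cotangent coordinates $(x,h;k,\xi)$, in which the conormal is $(TI)^\perp=\{h=0,\ k=0\}$. Any vector field $X\in\mathfrak{g}$ has the H\"ormander normal form $X=h^j a_j^i(x,h)\partial_{h^i}+b^i(x,h)\partial_{x^i}$, so its symbol is $\sigma(X)=h^j a_j^i\xi_i+b^i k_i$ and the Hamiltonian lift reads
\begin{equation*}
X^\star=\frac{\partial\sigma}{\partial\xi_l}\partial_{h^l}+\frac{\partial\sigma}{\partial k_l}\partial_{x^l}-\frac{\partial\sigma}{\partial h^l}\partial_{\xi_l}-\frac{\partial\sigma}{\partial x^l}\partial_{k_l}.
\end{equation*}

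For the first assertion (tangency), I would check that the two components of $X^\star$ which are transverse to $(TI)^\perp$, namely those in $\partial_{h^l}$ and $\partial_{k_l}$, vanish when restricted to $\{h=0,k=0\}$. A direct computation gives $\partial\sigma/\partial\xi_l=h^j a_j^l$, which vanishes on $\{h=0\}$, and $\partial\sigma/\partial x^l=h^j(\partial_{x^l}a_j^i)\xi_i+(\partial_{x^l}b^i)k_i$, which vanishes on $\{h=0,k=0\}$. The remaining components $\partial\sigma/\partial k_l=b^l$ and $\partial\sigma/\partial h^l$ are tangent respectively to $I$ and to the fibre direction of $(TI)^\perp$, hence to $(TI)^\perp$. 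Thus $X^\star$ is tangent to $(TI)^\perp$, which implies that the integral curves starting in $(TI)^\perp$ remain in $(TI)^\perp$; this gives the global invariance of $(TI)^\perp$ under $e^{X^\star}$.

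For the second assertion, I would first characterize $F_2$ in coordinates: unfolding the definition $X\mathcal{I}\subset\mathcal{I}^3$ using Hadamard's lemma (apply $X$ to the generators $h^l$ and to arbitrary products $h^l g_l$) shows that $X\in F_2$ iff it can be written
\begin{equation*}
X=h^i h^j A_{ij}^k(x,h)\partial_{h^k}+h^i h^j B_{ij}^k(x,h)\partial_{x^k}.
\end{equation*}
Consequently $\sigma(X)=h^i h^j(A_{ij}^k\xi_k+B_{ij}^k k_k)$ vanishes to order two in $h$. Differentiating with respect to $\xi_l$, $k_l$, $h^l$ and $x^l$ and restricting to $\{h=0,k=0\}$, each of the four coefficients of $X^\star$ vanishes identically on $(TI)^\perp$: the $\partial_{h^l}$ and $\partial_{k_l}$ components vanish to order two in $h$, the $\partial_{\xi_l}$ component contains one factor of $h$ or a factor of $k$, and the $\partial_{x^l}$ component vanishes to order two in $h$. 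Hence $X^\star|_{(TI)^\perp}=0$, so $(TI)^\perp$ is pointwise fixed by the flow $e^{X^\star}$.

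The only real subtlety is to make sure the conclusions do not depend on the chart: tangency to the intrinsic submanifold $(TI)^\perp$, and vanishing of a vector field on it, are both coordinate-invariant properties, so verifying them in one adapted chart around each point of $I$ is sufficient. Accordingly the argument is essentially a bookkeeping of Taylor orders in $h$ and $k$, and there is no serious analytic obstacle.
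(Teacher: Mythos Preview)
Your coordinate computation for the first assertion is correct. The paper phrases the same fact more conceptually: since $\sigma(X)=h^ja_j^i\xi_i+b^ik_i$ vanishes on the Lagrangian submanifold $(TI)^\perp$, its Hamiltonian vector field is automatically tangent to it (for any $f$ vanishing on a Lagrangian $L$ one has $df|_{TL}=0$, hence $\omega(\nabla_\omega f,\cdot)|_{TL}=0$, so $\nabla_\omega f\in (TL)^{\perp_\omega}=TL$). Your version simply unwinds this in coordinates; the symplectic formulation has the advantage of being manifestly chart-independent.

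For the second assertion your ``iff'' characterization of $F_2$ is not quite right: applying $X$ to the generator $h^l$ must land in $\mathcal{I}^3$, which forces the $\partial_{h^k}$-coefficient into $\mathcal{I}^3$ (three $h$-factors), not $\mathcal{I}^2$; for instance $X=h^1h^2\partial_{h^1}$ has your form but $Xh^1=h^1h^2\notin\mathcal{I}^3$. This slip does not affect your conclusion, since your class still contains $F_2$ and your verification that $X^\star|_{(TI)^\perp}=0$ is correct for it. In fact the paper's own proof works with $F_1$ rather than $F_2$ (and the lemma is later applied with $X\in F_1$, so the ``$F_2$'' in the statement is likely a slip): with $\sigma(X)=h^jh^ia_{ji}^l\xi_l+h^ib_i^lk_l$ one checks directly that all four partial derivatives vanish on $\{h=0,k=0\}$. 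So your argument proves the statement as written, the paper's proves the slightly stronger $F_1$ version, and the underlying mechanism is the same.
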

\begin{proof} Any vector in $\mathfrak{g}$ admits the decomposition $ h^ja_j^i(x,h)\partial_{h^i} + b^i(x,h)\partial_{x^i} $. Thus the symbol map $\sigma(X)\in C^\infty(T^\star M)$ equals $ h^ja_j^i(x,h)\xi_i + b^i(x,h)k_i $. This function vanishes on the conormal bundle $(TI)^\perp$ which is a Lagrangian submanifold. Now we are reduced to the following problem:
given a function $f$ in a symplectic manifold which vanishes along a Lagrangian submanifold $C$, what can be said about the symplectic gradient $\nabla_\omega f$ along $C$ ?
%We give a quick argument due to Frederic Helein.
%By the Darboux Weinstein theorem, we can always locally 
%reduce by symplectomorphism to the case $\Lambda$ 
%is the zero section of a cotangent bundle. Then  
%we can conclude by a simple computation that 
%the vector field $\nabla_\omega f$ must be tangent
%to $\Lambda$. 
Since $f|_L=0$, for all $v\in TL$, $df(v)=0$. But $\forall v\in TL, 0=df(v)=\omega(\nabla_\omega f,v)$
which means that $\nabla_\omega f$ is in the orthogonal of $TL$ for the symplectic form
$\omega$. Since $L$ is a \textbf{Lagrangian} submanifold of $T^\star M$, this orthogonal is equal to $TL$, finally $\nabla_\omega f\in TL$.
If $X\in F_1$, then $\sigma(X)= h^jh^ia^l_{ji}(x,h)\xi_l + h^ib_{i}^l(x,h)k_l $ by the Hadamard lemma.
%Notice $\sigma(F_2)$ is a \emph{Poisson ideal} because $F_2$ is stable by Lie bracket, ie it is a Lie subalgebra of the Lie algebra of tangent vector fields. 
The symplectic gradient $X^\star$ is given by the formula
$$X^\star= \frac{\partial \sigma(X)}{\partial k_i}\partial_{x^i} - \frac{\partial \sigma(X)}{\partial x^i}\partial_{k_i}+\frac{\partial \sigma(X)}{\partial \xi_i}\partial_{h^i} - \frac{\partial \sigma(X)}{\partial h^i}\partial_{\xi_i},$$ 
%$$= -h^i\partial x^i(h^ja^l_{ji}(x,h)\xi_l - b_{i}^l(x,h)k_l)\partial_{k_i}+h^ib_{i}^l(x,h)\partial_{x^l} $$
thus $X^\star=0$ when $k=0,h=0$ which means $X^\star=0$ on the conormal $(TI)^\perp$ of $I$.
\end{proof}
\begin{prop}\label{Idlift}
Let $\rho_1,\rho_2$ be two 
Euler vector fields and $\Phi(\lambda)=e^{-\log\lambda \rho_1}\circ e^{\log\lambda \rho_2}$.
Then the cotangent lift $T^\star \Phi(\lambda)$
restricted to $(TI)^\perp$ is the identity map:
$$T^\star \Phi(\lambda)|_{(TI)^\perp}=Id|_{(TI)^\perp}.$$
In particular, the diffeomorphism 
$\Psi=\Phi(0)$ (Corollary \ref{conjugcoro}) 
which conjugates $\rho_1$ with
$\rho_2$ satisfies the same property.  
\end{prop}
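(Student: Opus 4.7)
The plan is to exhibit $\Phi(\lambda)$ as the time-$\log\lambda$ map of a non-autonomous flow whose generator lies in $F_1$, and then use that the Hamiltonian lift of an $F_1$ vector field vanishes identically on $(TI)^\perp$, which is precisely the computation carried out in the proof of Lemma \ref{lemmlift}. Set $\mu=\log\lambda$ and
$$\Phi_\mu=e^{-\mu\rho_1}\circ e^{\mu\rho_2},$$
so that $\Phi(\lambda)=\Phi_{\log\lambda}$, with $\Phi_0=\mathrm{Id}$ and the non-autonomous flow equation $\frac{d}{d\mu}\Phi_\mu=X_\mu\circ\Phi_\mu$, where the generator is
$$X_\mu=-\rho_1+(e^{-\mu\rho_1})_*\rho_2.$$

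The first key point is that $X_\mu\in F_1$ for every $\mu$. Indeed, $\rho_1$ is tangent to $I$, so its flow $e^{-\mu\rho_1}$ lies in the pseudogroup $G$ of local diffeomorphisms preserving $I$; by the proposition of the Euler section showing that $G$ acts on the class of Euler vector fields, $(e^{-\mu\rho_1})_*\rho_2$ is again an Euler vector field. A difference of two Euler vector fields lies in $F_1$: applying the defining identity $\rho f-f\in\mathcal{I}^2$ to both $\rho_1$ and $(e^{-\mu\rho_1})_*\rho_2$ gives $X_\mu f\in\mathcal{I}^2$ for every $f\in\mathcal{I}$.

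The second key point is the $F_1$ version of Lemma \ref{lemmlift}, established in its proof: for $X\in F_1$ the symbol $\sigma(X)$ writes $h^ih^ja^l_{ij}(x,h)\xi_l+h^ib_i^l(x,h)k_l$, so every partial derivative of $\sigma(X)$ vanishes at points $(x,0;0,\xi)$, and hence the Hamiltonian lift $X_\mu^\star$ vanishes identically on $(TI)^\perp$. Because the cotangent functor intertwines composition and sends the flow of a vector field to the flow of its Hamiltonian lift, $T^\star\Phi_\mu$ solves the non-autonomous ODE on $T^\star M$ with generator $X_\mu^\star$ and initial condition $T^\star\Phi_0=\mathrm{Id}$. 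For any $p\in(TI)^\perp$ the constant path $\mu\mapsto p$ then solves this ODE, so uniqueness yields $T^\star\Phi_\mu(p)=p$ for all $\mu$, i.e.\ $T^\star\Phi(\lambda)|_{(TI)^\perp}=\mathrm{Id}$ for $\lambda\in(0,1]$. The extension to $\Psi=\Phi(0)$ follows from the smoothness in $\lambda\in[0,1]$ provided by Corollary \ref{conjugcoro}.

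The delicate point is the verification that the time-dependent generator $X_\mu$ really is a difference of Euler vector fields for every $\mu$, and not merely at $\mu=0$; this rests on the non-trivial fact, established earlier in the chapter, that pushforwards of Eulers by elements of $G$ remain Euler. Once this is in place, the conclusion follows by the ODE uniqueness argument combined with the coordinate computation recalled from the proof of Lemma \ref{lemmlift}.
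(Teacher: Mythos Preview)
Your proof is correct and follows essentially the same route as the paper: write $\Phi(\lambda)$ as the flow of a time-dependent vector field, observe that this generator is a difference of Euler vector fields and hence lies in $F_1$, invoke the $F_1$ case of Lemma~\ref{lemmlift} to see that its Hamiltonian lift vanishes on $(TI)^\perp$, and conclude by ODE uniqueness from the initial condition $\Phi(1)=\mathrm{Id}$. The only cosmetic differences are your use of the time variable $\mu=\log\lambda$ (which avoids the $\frac{1}{\lambda}$ factor present in the paper's $X(\lambda)$) and your slightly more explicit justification that $X_\mu\in F_1$ via the $G$-invariance of the class of Euler vector fields.
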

\begin{proof}
Let us set
\begin{equation}\label{conjugchapt1}
\Phi(\lambda)=e^{-\log\lambda \rho_1}\circ e^{\log\lambda \rho_2} 
\end{equation}
which is a family of diffeomorphisms
which depends smoothly in $\lambda\in[0,1]$
according to \ref{propositionvariablefamily},
then $\Phi(0)$ 
is the diffeomorphism
which locally conjugates 
$\rho_1$ and $\rho_2$ (Corollary \ref{conjugcoro}).
The proof is similar to the proof of proposition \ref{propositionvariablefamily},
$\Phi(\lambda)$ satisfies the differential equation:
\begin{equation}
\lambda\frac{d\Phi(\lambda)}{d\lambda}=e^{-\log\lambda\rho_1}\left(\rho_2-\rho_1\right)e^{\log\lambda\rho_1}\Phi(\lambda) \text{ where } \Phi(1)=Id
\end{equation}
we reformulated this differential equation as
\begin{equation}
\frac{d\Phi(\lambda)}{d\lambda}=X(\lambda)\Phi(\lambda),\Phi(1)=Id
\end{equation}
where the vector field $X(\lambda)=\frac{1}{\lambda}e^{-\log\lambda\rho_1}\left(\rho_2-\rho_1\right)e^{\log\lambda\rho_1}$ depends smoothly in $\lambda\in [0,1]$.
The cotangent lift $T^\star \Phi_\lambda$ satisfies the differential equation
\begin{equation}
\frac{dT^\star\Phi(\lambda)}{d\lambda}=X^\star(\lambda)T^\star\Phi(\lambda),T^\star \Phi(1)=Id
\end{equation} 
Notice that $\forall \lambda\in[0,1], X(\lambda)\in F_1$ which implies that for all $\lambda$ the lifted Hamiltonian vector field $X^\star(\lambda)$ will vanish on $(TI)^\perp$ by the lemma (\ref{lemmlift}). Since $T^\star \Phi(1)=Id$ obviously fixes the conormal, this immediately implies that $\forall\lambda, T^\star\Phi(\lambda)|_{(TI)^\perp}=Id|_{(TI)^\perp}$.
\end{proof}
\subsection{The pull-back is bounded.}
\paragraph{The problem we solve.}
We start from a distribution $t\in \mathcal{D}^\prime(M\setminus I)$ 
such that $WF(t)$
satisfies the soft landing condition. 
We assumed that 
there exists a generalized Euler 
$\rho_1$ and a small neighborhood $\mathcal{V}$ of $I$
such that $\lambda^{-s}e^{-\log\lambda\rho_1*}t $ is bounded in $\mathcal{D}^\prime_\Gamma(\mathcal{V}\setminus I)$ where $\Gamma=\bigcup_{\lambda\in(0,1]}WF(e^{\log\lambda\rho_1\star}t)$. 
%We want to stress the fact that the boudedness hypothesis is a property of $t$ and does not only depend on $WF(t)$. For instance, this property should be checked for the Wightman function $\Delta_+$ and 
%this will be done in Chapter $5$.
Under these conditions, by the main theorem of Chapter $3$, we know that 
the extension $\overline{t}$ is well defined, $WF(\overline{t})\subset \overline{WF(t)}\cup C$ and for every $s^\prime < s$, 
$\lambda^{-s^\prime}e^{\log\lambda\rho_1}\overline{t} $ is bounded in $\mathcal{D}^\prime_{\overline{\Gamma}\cup C}(\mathcal{V})$.
% avec ce champ d'Euler pour
%fabriquer un cône \Gamma défini sur M.
% Du coup, si t est dans un E_s,
%son prolongement à M existe et le front d'onde de
%ce prolongement est dans l'union de la fermeture de WF(t)
%et du conormal de I.
We proved (Proposition \ref{propositionvariablefamily} 
Chapter $1$) that 
when we change 
the Euler vector field from 
$\rho_1$ to $\rho_2$, 
we have:
$$\lambda^{-s}e^{\log\lambda\rho_2*}t=\Phi(\lambda)^\star\underset{\text{bounded in }\mathcal{D}^\prime_{\Gamma_1}}{\underbrace{\left(\lambda^{-s}e^{\log\lambda\rho_1*}t\right)}}.$$ 
The above equation
motivates us to study
a more general question,
is the image of a bounded 
set
in $\mathcal{D}^\prime_\Gamma$
by a diffeomorphism $\Phi$
still
a bounded family in 
$\mathcal{D}^\prime_{\Phi^\star\Gamma}$?

\subsection{The action of Fourier integral operators.}
Fourier integral operators are abbreviated FIO.
In this section, we will work 
exclusively in $\mathbb{R}^d$ 
since 
our problem is local.
To solve our problem, we will have to revisit a deep theorem of H\"ormander (see \cite{Hormander} theorem $8.2.4$) which describes the wave front set of distributions under pull back.
However, we will reprove a variant of this theorem which is tailored for applications in QFT.
First, we prove the theorem for a specific subclass of FIO (as discussed in \cite{Eskin}) which contains the space of diffeomorphisms and we also 
give
explicit bounds for
the
seminorms of $\mathcal{D}^\prime_\Gamma$. 
We deliberately choose to discuss everything in the language of canonical relations and symplectomorphisms since these are at the core of the geometric ideas involved in the proof.
\subsubsection{A quick reminder about the formalism of FIO.}
 We recall the definition of a specific class of FIO following \cite{Eskin}. And we will frequently use several notions that can be found in \cite{Eskin}.
\paragraph{The definition of Eskin's FIO.}
We adapt the definition of \cite{Eskin} to our context, 
we consider operators 
of the form: 
$$U: \mathcal{D}(\mathbb{R}^d)\times \mathcal{D}^\prime(\mathbb{R}^d) \mapsto \mathcal{D}^\prime(\mathbb{R}^d)$$
\begin{equation}\label{Eskinform}
(\varphi,t)\mapsto U_\varphi t=\frac{1}{(2\pi)^d}\int_{\mathbb{R}^d}d\eta e^{iS(x,\eta)}a(x,\eta)\widehat{t\varphi}(\eta) 
\end{equation}
where $S$
is smooth, homogeneous of degree $1$ in $\eta$
and $\det \frac{\partial^2 S}{\partial x\partial \eta}\neq 0$, 
we do not assume $a=0$ if $\vert\eta\vert<1$ since for diffeomorphisms
$a=1$, and this 
does only change the FIO modulo smoothing operator (see \cite{Eskin} p.~330).
The Schwartz kernel of $U_\varphi$ is
the Fourier distribution which 
by a slight abuse of notation reads:
$$U_\varphi(x,y)=
\frac{1}{(2\pi)^d}\int_{\mathbb{R}^d}d\eta e^{iS(x,\eta)-iy.\eta}a(x,\eta)\varphi(y).$$
See \cite{Eskin} p.~341.
\begin{lemm}
Let $\Phi$
be a diffeomorphism of $\mathbb{R}^d$ and 
$\varphi\in  \mathcal{D}(\mathbb{R}^d)$. Then there exists an operator $U_\varphi$ as in \ref{Eskinform} such that $\forall t\in \mathcal{D}^\prime(\mathbb{R}^d)$, 
$U_\varphi(t)=\Phi^\star(t\varphi)$. 
\end{lemm}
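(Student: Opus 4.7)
The plan is to take the obvious candidate coming from Fourier inversion. Since $\varphi\in\mathcal{D}(\mathbb{R}^d)$ and $t\in\mathcal{D}^\prime(\mathbb{R}^d)$, the product $t\varphi$ is a compactly supported distribution, so by Paley--Wiener its Fourier transform $\widehat{t\varphi}$ is a smooth function of polynomial growth, and the standard inverse Fourier formula
\[
(t\varphi)(y)=\frac{1}{(2\pi)^d}\int_{\mathbb{R}^d} e^{iy\cdot\eta}\,\widehat{t\varphi}(\eta)\,d\eta
\]
holds in the distributional sense. Substituting $y=\Phi(x)$ on both sides suggests setting
\[
S(x,\eta)=\Phi(x)\cdot\eta,\qquad a(x,\eta)=1,
\]
so that the associated operator takes exactly the form \eqref{Eskinform}.

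The next step is to check that this $(S,a)$ satisfies the three conditions in the definition of the class of FIOs. Smoothness of $S$ is clear since $\Phi$ is a diffeomorphism and $S$ is linear in $\eta$; homogeneity of degree $1$ in $\eta$ is immediate. For the non-degeneracy condition, one computes
\[
\frac{\partial^2 S}{\partial x^i\,\partial \eta_j}=\frac{\partial \Phi^j}{\partial x^i},
\]
so that the matrix $\frac{\partial^2 S}{\partial x\,\partial \eta}$ is exactly the Jacobian $D\Phi$, whose determinant never vanishes because $\Phi$ is a diffeomorphism.

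Finally, I would verify the identity $U_\varphi t=\Phi^\star(t\varphi)$. Since $t\varphi$ is compactly supported, its inverse Fourier representation above is a convergent oscillatory integral in the sense of distributions. Pulling this equality back by $\Phi$ and using that $\Phi^\star$ commutes with the duality pairing against a test function gives
\[
\Phi^\star(t\varphi)(x)=\frac{1}{(2\pi)^d}\int_{\mathbb{R}^d} e^{i\Phi(x)\cdot\eta}\,\widehat{t\varphi}(\eta)\,d\eta=U_\varphi t(x),
\]
which is the desired identity.

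The only real subtlety is making sense of the oscillatory integral for an arbitrary distribution $t$, but this is routine: one may test against $\psi\in\mathcal{D}(\mathbb{R}^d)$ and use Fubini on $\langle \widehat{t\varphi},\widehat{\psi\circ\Phi^{-1}}|\det D\Phi^{-1}|\rangle$, reducing everything to the usual Parseval/Plancherel identity for the compactly supported distribution $t\varphi$ and the Schwartz function obtained by transporting $\psi$ through $\Phi^{-1}$. No genuine analytic difficulty arises because $\widehat{t\varphi}$ is smooth of polynomial growth, which is precisely the regularity needed to treat the phase $\Phi(x)\cdot\eta$ as an honest change of variables in the Fourier side.
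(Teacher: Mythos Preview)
Your proof is correct and follows essentially the same approach as the paper: write $t\varphi$ via the inverse Fourier transform of the compactly supported distribution, pull back by $\Phi$, and read off the phase $S(x,\eta)=\Phi(x)\cdot\eta$ with amplitude $a=1$. If anything you are more thorough than the paper, which does not explicitly verify the non-degeneracy condition $\det\frac{\partial^2 S}{\partial x\partial\eta}=\det D\Phi\neq 0$ or discuss the convergence of the oscillatory integral.
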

We will later choose $\varphi$ as an element of an ad hoc partition of unity defined by the approximation lemmas (\ref{approxlemma1},\ref{approxlemma2}).
\begin{proof}
Our proof follows the strategy outlined in \cite{DuistermaatFIO} proposition $(1.3.3)$. 
The idea is to write down $t\varphi$ as the inverse Fourier transform of $\widehat{t\varphi}$. 
$$t\varphi=\mathcal{F}^{-1}\left( \widehat{t\varphi} \right)=\frac{1}{(2\pi)^{d}}\int_{\mathbb{R}^{d}} d\eta e^{ix.\eta}\widehat{t\varphi}(\eta) $$ 
Now, we pull-back $t\varphi$ 
by the diffeomorphism
$\Phi$ : 
$$\Phi^*\left(t\varphi\right)(x)= \Phi^*\mathcal{F}^{-1}\left(\widehat{t\varphi}\right)(x)=\frac{1}{(2\pi)^{d}}\int_{\mathbb{R}^{d}}d\eta e^{i \Phi(x).\eta} \widehat{t\varphi}(\eta) $$
Now setting $S(x;\eta)=\Phi(x).\eta$, we recognize the phase function $S$ appearing in  (\ref{Eskinform}).
\end{proof}
In the following, given 
a generating function
$S$, we denote by $\sigma$
the canonical transformation
defined by:
\begin{equation}\label{canorel}
\\ \sigma:(y;\eta)\mapsto (x;\xi) ,\xi=\frac{\partial S}{\partial x}(x,\eta),y=\frac{\partial S}{\partial \eta}(x,\eta),
\end{equation}
see Equation (61.2) p.~330 in \cite{Eskin}.
\begin{thm}\label{Fouriervariablephase}
Let $(t_\mu)_\mu$ be bounded in 
$\mathcal{D}^\prime_\Gamma(\Omega),\Omega\subset \mathbb{R}^d$. 
Let $U$
be a \textbf{proper} operator
as defined in (\ref{Eskinform}) 
with amplitude $a=1$ 
and generating function
$S$ and $\sigma$
the corresponding canonical relation.
Then $\left(U t_\mu\right)_{\mu}$ 
is bounded 
in $D_{\sigma\circ\Gamma}^\prime(\Omega)$. 
\end{thm}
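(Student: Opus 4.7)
The plan is to establish the two conditions defining boundedness in $\mathcal{D}^\prime_{\sigma\circ\Gamma}(\Omega)$: weak boundedness of $(Ut_\mu)_\mu$ in $\mathcal{D}^\prime(\Omega)$, and uniform control of every seminorm $\Vert\cdot\Vert_{N,V,\chi}$ for pairs $(V,\chi)$ with $(\mathrm{supp}\,\chi\times V)\cap\sigma\circ\Gamma=\emptyset$. The weak part is straightforward: pairing $\langle Ut_\mu,\psi\rangle$ for $\psi\in\mathcal{D}(\Omega)$ against the oscillatory kernel of $U$ rewrites it as $\langle t_\mu,K_\psi\rangle$, where the transpose kernel $K_\psi(y)=\varphi(y)\int\int e^{i(S(x,\eta)-y\cdot\eta)}\psi(x)\,dx\,d\eta/(2\pi)^d$ is smooth and compactly supported thanks to $\det\partial^2_{x\eta}S\neq 0$ (non-stationary phase in $x$) and properness of $U$. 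Weak boundedness of $(t_\mu)_\mu$ in $\mathcal{D}^\prime$ then gives the bound uniformly in $\mu$.

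For the microlocal part I would apply Lemmas \ref{approxlemma1} and \ref{approxlemma2} to $\Gamma$ to obtain a locally finite partition of unity $(\varphi_j^2)_{j\in J}$ together with $0$-homogeneous frequency cutoffs $\alpha_j=1$ on a conic neighborhood $W_j$ of $\pi_2(\Gamma|_{\mathrm{supp}\,\varphi_j})$, yielding
$$t_\mu = \sum_j \varphi_j\mathcal{F}^{-1}(\alpha_j\widehat{t_\mu\varphi_j}) + \sum_j \varphi_j\mathcal{F}^{-1}((1-\alpha_j)\widehat{t_\mu\varphi_j}).$$
The second sum is a uniformly bounded family in $C^\infty$ by Theorem \ref{boundfourier}, so its image under $U$ contributes harmlessly to every seminorm. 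The heart of the argument is to estimate the Fourier transform of $\chi$ times the singular image:
$$\mathcal{F}[U(\varphi_j\mathcal{F}^{-1}(\alpha_j\widehat{t_\mu\varphi_j}))\chi](\xi) = \frac{1}{(2\pi)^d}\int\!\!\int e^{i(S(x,\eta)-x\cdot\xi)}\chi(x)\alpha_j(\eta)\widehat{t_\mu\varphi_j}(\eta)\,dx\,d\eta$$
for $\xi\in V$. By refining the partition so that $\mathrm{supp}\,\varphi_j\times\mathrm{supp}\,\alpha_j$ stays in a small conic neighborhood of $\Gamma$, the assumption $(\mathrm{supp}\,\chi\times V)\cap\sigma\circ\Gamma=\emptyset$, read through the relations $\xi=\partial_xS(x,\eta)$ and $y=\partial_\eta S(x,\eta)$ defining $\sigma$, produces a uniform lower bound $|\partial_xS(x,\eta)-\xi|\geq c(|\xi|+|\eta|)$ on the support of the integrand, with $c>0$. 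Integrating by parts $N+m$ times with the non-stationary-phase operator $L=(i|\partial_xS-\xi|^2)^{-1}\overline{(\partial_xS-\xi)}\cdot\partial_x$ produces a factor $(1+|\xi|+|\eta|)^{-(N+m)}$, and pairing with the uniform polynomial estimate $|\widehat{t_\mu\varphi_j}(\eta)|\leq C(1+|\eta|)^m$ from Theorem \ref{boundfourier} delivers the required $(1+|\xi|)^{-N}$ decay, uniformly in $\mu$.

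The main obstacle is the geometric calibration of the partition $(\varphi_j)$ together with the angular thickening from $W_j$ to $\mathrm{supp}\,\alpha_j$: one must show that a sufficiently fine choice makes $\sigma(\mathrm{supp}\,\varphi_j\times\mathrm{supp}\,\alpha_j)$ disjoint from $\mathrm{supp}\,\chi\times V$ for every $j$, and then promote this disjointness to the quantitative non-stationary-phase inequality by a compactness argument on the unit cosphere in the spirit of Lemma \ref{approxlemma1}. Properness of $U$ keeps the sum over $j$ locally finite, so the seminorm estimate survives summation and the uniform bound in $\mu$ is preserved, giving $(Ut_\mu)_\mu$ bounded in $\mathcal{D}^\prime_{\sigma\circ\Gamma}(\Omega)$ as claimed.
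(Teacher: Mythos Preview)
Your approach is essentially the same as the paper's. Both compute $\mathcal{F}(\chi\, U(t_\mu\varphi_j))(\xi)$ as the double integral with phase $S(x,\eta)-x\cdot\xi$, separate the piece where $\widehat{t_\mu\varphi_j}$ decays rapidly (the $(1-\alpha_j)$-part) from the piece where it does not (the $\alpha_j$-part), and handle the latter by non-stationary phase after arranging that for $\xi\in V$, $x\in\mathrm{supp}\,\chi$ and $\eta\in\mathrm{supp}\,\alpha_j$ the gradient $\partial_xS(x,\eta)-\xi$ stays quantitatively away from zero. The paper phrases this split as a decomposition of the $\eta$-domain into a neighbourhood $R_\varepsilon(\xi)$ of the critical set and its complement, while you split $\widehat{t_\mu\varphi_j}$ via $\alpha_j$ versus $1-\alpha_j$; the two are equivalent once one has $\mathrm{supp}\,\alpha_j\subset R_\varepsilon^c(\xi)$ for $\xi\in V$, which is exactly the ``geometric calibration'' you identify as the main obstacle and which the paper isolates as a separate lemma (Lemma~\ref{Retsigma}). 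Your linear lower bound $|\partial_xS(x,\eta)-\xi|\geq c(|\xi|+|\eta|)$ is the homogeneous reformulation of the paper's condition $|\partial_xS(x,\eta)-\xi|>\varepsilon$ on the unit sphere, so the stationary-phase estimate (\ref{Statphase}) is the same.

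One small technical slip: your displayed formula for $\mathcal{F}[U(\varphi_j\mathcal{F}^{-1}(\alpha_j\widehat{t_\mu\varphi_j}))\chi](\xi)$ is not literally correct, because the outer $\varphi_j$ produces a convolution $\widehat{\varphi_j}\ast(\alpha_j\widehat{t_\mu\varphi_j})$ rather than the bare $\alpha_j\widehat{t_\mu\varphi_j}$. The clean fix, which the paper adopts, is to apply the FIO directly to $t_\mu\varphi_j$ (so the localisation function in the definition of $U_\varphi$ is $\varphi_j$ itself) and then insert $1=\alpha_j+(1-\alpha_j)$ inside the $\eta$-integral; this avoids the extra convolution entirely.
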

We will decompose the proof of the theorem in many different lemmas.
Our strategy goes as follows, we have some bounds on $\widehat{t\varphi}$ where $\varphi\in  \mathcal{D}(\mathbb{R}^d)$ because we know that $t\in\mathcal{D}^\prime_\Gamma$ by the hypothesis of the theorem and we want to deduce from these bounds some estimates on the Fourier transform $\mathcal{F}\left(\chi U\left(t\varphi\right)\right)$. We first prove a lemma which
gives an estimate of $WF(U\left(t\varphi\right))$.
\begin{lemm}
Let $U$
be a \textbf{proper} operator
as defined in (\ref{Eskinform}) 
with amplitude $a=1$ 
and generating function
$S$, $\sigma$
the corresponding canonical transformation
and 
$\varphi\in  \mathcal{D}(\mathbb{R}^d)$. 
Then for all $t\in \mathcal{D}^\prime_\Gamma$, 
$WF(U_\varphi t)\subset\sigma\circ\Gamma$.
\end{lemm}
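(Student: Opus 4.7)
The plan is to establish the statement by a non-stationary phase argument. Fix $(x_0,\xi_0)\notin \sigma\circ\Gamma$; the goal is to exhibit a cutoff $\chi\in\mathcal{D}(\mathbb{R}^d)$ with $\chi(x_0)\neq 0$ and a closed conic neighborhood $V$ of $\xi_0$ such that $\mathcal{F}(\chi U_\varphi t)(\xi)$ decays faster than any inverse polynomial for $\xi\in V$, which is exactly the criterion $(x_0,\xi_0)\notin WF(U_\varphi t)$.

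First, using Lemma \ref{approxlemma2} applied to $t\in\mathcal{D}^\prime_\Gamma$, I split $\widehat{t\varphi}(\eta)=\alpha(\eta)\widehat{t\varphi}(\eta)+(1-\alpha(\eta))\widehat{t\varphi}(\eta)$, where $\alpha\in C^\infty(\mathbb{R}^d\setminus 0)$ is homogeneous of degree $0$ and supported in a conic neighborhood $W$ (as narrow as needed) of $\pi_2\bigl(\Gamma\cap(\text{supp}\,\varphi\times T^\bullet\mathbb{R}^d)\bigr)$, with $(\text{supp}\,\varphi\times\text{supp}\,(1-\alpha))\cap\Gamma=\emptyset$. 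The piece $(1-\alpha)\widehat{t\varphi}$ then has rapid decay in $\eta$, so its oscillatory-integral contribution to $U_\varphi t$ is a smooth function, hence contributes nothing to the wave front set. It remains to estimate
$$I_1(\xi)=\frac{1}{(2\pi)^d}\int dx\,d\eta\,\chi(x)\,e^{i(S(x,\eta)-x\cdot\xi)}\alpha(\eta)\,\widehat{t\varphi}(\eta).$$

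The crux is the following uniform non-stationary phase estimate: provided $\text{supp}\,\chi$ is chosen close enough to $x_0$, $V$ a narrow enough closed cone around $\xi_0$, and $W$ a tight enough conic neighborhood of $\pi_2(\Gamma|_{\text{supp}\,\varphi})$, there exists $c>0$ such that
$$|\partial_x S(x,\eta)-\xi|\geq c(|\eta|+|\xi|)$$
uniformly for $x\in\text{supp}\,\chi$, $\xi\in V$, and $\eta\in\text{supp}\,\alpha$. This rests on three ingredients: the definition of the canonical relation $\sigma$ in Equation (\ref{canorel}), which translates the hypothesis $(x_0,\xi_0)\notin\sigma\circ\Gamma$ into $\partial_x S(x_0,\eta)\neq\xi_0$ whenever $(\partial_\eta S(x_0,\eta),\eta)\in\Gamma$; the degree-one homogeneity of $\partial_x S$ in $\eta$, which reduces the bound to a statement on the compact intersection of the unit sphere with $W$; and a continuity/compactness argument to propagate the strict inequality from the single point $(x_0,\xi_0)$ to the full set $\text{supp}\,\chi\times V$ while keeping $W$ tight around the projection of $\Gamma$.

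Once this lower bound is secured, I close the argument by repeated integration by parts in $x$ against the operator $L=\frac{1}{i|\partial_x\Phi|^2}\sum_j\overline{\partial_{x_j}\Phi}\,\partial_{x_j}$, where $\Phi=S(x,\eta)-x\cdot\xi$ and $Le^{i\Phi}=e^{i\Phi}$. Each application of its formal adjoint $L^t$ produces a factor bounded by a constant times $(|\eta|+|\xi|)^{-1}$; combined with the polynomial bound $|\widehat{t\varphi}(\eta)|\leq C(1+|\eta|)^m$ (valid since $t\varphi$ is a compactly supported distribution of finite order $m$), $N$ iterations yield $|I_1(\xi)|\leq C_N(1+|\xi|)^{-N+m+d+1}$, and choosing $N$ arbitrarily large gives the required rapid decay on $V$. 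The main obstacle I expect is the geometric estimate above: one must pick the three localizations $\chi$, $V$, $W$ simultaneously and tightly enough, and one must separately handle the high-frequency regime $|\eta|\gtrsim|\xi|$ (where the wave front hypothesis on $t$ enters through $\alpha$ and $W$) and the regime $|\eta|\ll|\xi|$ (where the phase $-x\cdot\xi$ dominates and the lower bound follows just from $|\xi|$ being large, with no use of $\alpha$).
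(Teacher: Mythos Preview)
Your non-stationary phase approach is the right mechanism, and it is essentially what underlies Eskin's Theorem 63.1, which the paper simply cites to dispatch this lemma in one line. However, your localization has a genuine gap.

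The issue is with the choice of $\alpha$. You require $(\text{supp}\,\varphi\times\text{supp}(1-\alpha))\cap\Gamma=\emptyset$ so that $(1-\alpha)\widehat{t\varphi}$ is rapidly decreasing; this forces $\text{supp}\,\alpha$ to contain all of $\pi_2(\Gamma|_{\text{supp}\,\varphi})$. But $\varphi$ is \emph{given}, not chosen, and over $\text{supp}\,\varphi$ the fibers of $\Gamma$ may vary substantially. In particular it can happen that the critical direction $\eta_0$ (the unique $\eta$ with $\partial_x S(x_0,\eta)=\xi_0$, namely $\eta_0=({}^t d\Phi(x_0))^{-1}\xi_0$ in the diffeomorphism case) lies in $\pi_2(\Gamma|_{\text{supp}\,\varphi})$ even though $(\Phi(x_0),\eta_0)\notin\Gamma$: some \emph{other} point $y\in\text{supp}\,\varphi$ has $(y,\eta_0)\in\Gamma$. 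Then $\eta_0\in\text{supp}\,\alpha$ is forced, and your claimed lower bound $|\partial_x S(x,\eta)-\xi|\geq c(|\eta|+|\xi|)$ fails at $(x_0,\eta_0,\xi_0)$. Making $W$ ``tight'' cannot help, since $W$ must still contain the full projection $\pi_2(\Gamma|_{\text{supp}\,\varphi})$.

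The remedy is to insert a further partition of unity $(\psi_j)_j$ on $\text{supp}\,\varphi$ with small supports, write $t\varphi=\sum_j t\varphi\psi_j$, and use a different $\alpha_j$ for each $j$, tight around $\pi_2(\Gamma|_{\text{supp}\,\psi_j})$. For those $j$ with $\Phi(x_0)\notin\text{supp}\,\psi_j$ the contribution to $U_\varphi t$ vanishes near $x_0$; for the remaining (finitely many) $j$ the support is small enough that your bound holds. This is precisely the mechanism the paper sets up in the subsequent Lemma (the one constructing the pseudodifferential partition $(\alpha_j,\varphi_j)_j$ subordinate to $\sigma^{-1}(\text{supp}\,\chi\times V)$), so your approach, once repaired, converges with the paper's detailed estimates.
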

\begin{proof}
We denote by $(y;\eta)$ and $(x;\xi)$ the coordinates in $T^\star\mathbb{R}^d$.
Let $t$ be a distribution and $U$ a FIO of the form (\ref{Eskinform}) with phase function $S(x;\eta)-\left\langle y,\eta \right\rangle$. 
Then Theorem 63.1 in Eskin (see \cite{Eskin} p.~340) expresses $WF(U_\varphi t)$ in terms of
the image $\sigma\circ WF(t\varphi)$ of $WF(t\varphi)$ by the canonical relation $\sigma$ generated 
by $S$.
To apply the theorem
of Eskin,
we use the fact 
that
$t\varphi$ compactly supported 
$$\implies\Vert\theta^{-m}\widehat{t\varphi} \Vert_{L^\infty} < +\infty\implies \theta^{-m-\frac{d+1}{2}}\widehat{t\varphi}\in L^2(\mathbb{R}^d)
\Leftrightarrow \widehat{t\varphi}\in H^{-m-\frac{d+1}{2}}.$$
% For the image $Ut$ of a distribution $t$ by a FIO $U$ , we know how to locate $WF(Ut)$. It is by definition the image of $WF(t)$ by the canonical relation $\sigma_\lambda$ defined by the generating function $S_\lambda$ (see Eskin $61.3$).
\begin{equation} 
U_\varphi t(x)=\frac{1}{(2\pi)^{d}}\int_{\mathbb{R}^{2d}}dyd\eta e^{i[S(x;\eta)- y.\eta]} t\varphi(y)
\end{equation}
\begin{equation}
\\ \sigma:(y;\eta)\mapsto (x;\xi) ,\xi=\frac{\partial S}{\partial x}(x,\eta),y=\frac{\partial S}{\partial \eta}(x,\eta).
\end{equation}
The canonical transformation is the same as equation 61.2 p.~330 in \cite{Eskin}.
For convenience, we will write in local coordinates $\sigma(y,\eta)=(x(y,\eta),\xi(y,\eta))$.
In the particular case of a diffeomorphism $x\mapsto \Phi(x)$,
$$\frac{\partial S}{\partial \eta}(x,\eta)=\Phi(x),\frac{\partial S}{\partial x}(x,\eta)=\eta\circ d\Phi  $$   
and the corresponding family of canonical relations is 
\begin{equation} 
\sigma:(y,\eta)\mapsto (\Phi^{-1}(y),\eta\circ d\Phi). 
\end{equation} 
\end{proof}
 Motivated by this result, we will test $\Phi^*(t\varphi)$ on seminorms $\Vert.\Vert_{N,V,\chi}$, for a cone $V$ and test function $\chi$ such that $\text{supp }\chi\times V$ does not meet  $\sigma\circ\Gamma$.
\begin{lemm}
Let $U$ be given by \ref{Eskinform}, $\sigma$ the corresponding
canonical relation, $m$ a nonnegative integer, $\alpha\in C^\infty(\mathbb{R}^{d}\setminus{0})$, homogeneous of degree $0$, $\varphi\in \mathcal{D}(\mathbb{R}^d)$, $\chi\in  \mathcal{D}(\mathbb{R}^d)$ and $V\subset(\mathbb{R}^d\setminus{0})$ a closed cone.
If $\left(\text{supp }\chi\times V \right)\bigcap \sigma\circ\left(\text{supp }\varphi\times \text{supp }\alpha \right)=\emptyset$ and $\left(\text{supp }\varphi\times \text{supp }(1-\alpha)\right)\bigcap \Gamma=\emptyset$
then for all $N$, there exists
$C_N$ s.t. for all $t\in \mathcal{D}^\prime_\Gamma $ satisfying $\Vert \theta^{-m} \widehat{t\varphi_j}\Vert_{L^\infty}<+\infty$: 
\begin{equation}\label{est3}
 \Vert U\left(t\varphi\right)\Vert_{N,V,\chi}\leqslant C_N(1+\vert\xi\vert)^{-N} \left(\Vert \theta^{-m} \widehat{t\varphi}\Vert_{L^\infty} +
\Vert t\Vert_{N+d+1,W,\varphi}\right) 
\end{equation}
where $W=supp(1-\alpha)$.
\end{lemm}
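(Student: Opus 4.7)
The statement is a non-stationary phase estimate for the FIO $U$: I interpret the conclusion as the pointwise bound
$$|\widehat{\chi\, U(t\varphi)}(\xi)|\leqslant C_N(1+|\xi|)^{-N}\bigl(\|\theta^{-m}\widehat{t\varphi}\|_{L^\infty}+\|t\|_{N+d+1,W,\varphi}\bigr)\quad\forall\xi\in V,$$
which is equivalent to the claimed seminorm bound $\|U(t\varphi)\|_{N,V,\chi}$. The plan is to decompose $\widehat{t\varphi}=\alpha\,\widehat{t\varphi}+(1-\alpha)\widehat{t\varphi}$ in the integral representation of the FIO, writing $\widehat{\chi\, U(t\varphi)}(\xi)=J_1(\xi)+J_2(\xi)$ with
$$J_k(\xi)=(2\pi)^{-d}\!\iint e^{-ix\xi+iS(x,\eta)}\chi(x)a_k(\eta)\widehat{t\varphi}(\eta)\,dx\,d\eta,$$
where $a_1=\alpha$ and $a_2=1-\alpha$, and to control the two pieces separately.

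For $J_2$ the hypothesis $(\supp\varphi\times W)\cap\Gamma=\emptyset$ together with Lemma 8.2.1 of \cite{Hormander} gives the rapid decay $|(1-\alpha)(\eta)\widehat{t\varphi}(\eta)|\leqslant\|t\|_{N+d+1,W,\varphi}(1+|\eta|)^{-(N+d+1)}$. I would split the $\eta$-integration at $|\eta|=c|\xi|$ with $c$ so small that $|\partial_x S(x,\eta)|\leqslant|\xi|/2$ on $\supp\chi\times\{|\eta|\leqslant c|\xi|\}$, using the homogeneity of degree one of $S$ in $\eta$. On that region the phase $-x\xi+S(x,\eta)$ is non-stationary in $x$ with $|\partial_x(S-x\xi)|\geqslant|\xi|/2$, so $N$ integrations by parts in $x$ supply the factor $(1+|\xi|)^{-N}$, modulo a polynomial growth in $\eta$ absorbed by the $(N+d+1)$-weight. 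On the complementary region $|\eta|>c|\xi|$ one uses the trivial inequality $(1+|\eta|)^{-(N+d+1)}\leqslant c^{-N}(1+|\xi|)^{-N}(1+|\eta|)^{-(d+1)}$, which is integrable. Both halves yield $|J_2(\xi)|\leqslant C_N(1+|\xi|)^{-N}\|t\|_{N+d+1,W,\varphi}$.

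For $J_1$ a non-stationary phase argument in $x$ alone is not available because the hypothesis $(\supp\chi\times V)\cap\sigma(\supp\varphi\times\supp\alpha)=\emptyset$ allows $\xi=\partial_x S(x,\eta)$ to occur, provided $\partial_\eta S(x,\eta)\notin\supp\varphi$. I would therefore unfold $\widehat{t\varphi}(\eta)=\langle t,\varphi(\cdot)e^{-i(\cdot)\eta}\rangle$ and rewrite $J_1(\xi)=\langle t,\varphi(\cdot)K(\cdot,\xi)\rangle$ with the joint kernel
$$K(y,\xi)=(2\pi)^{-d}\!\iint e^{-ix\xi+iS(x,\eta)-iy\eta}\chi(x)\alpha(\eta)\,dx\,d\eta.$$
For $y\in\supp\varphi$ and $\xi\in V$, the phase $\Psi=-x\xi+S(x,\eta)-y\eta$ has no critical points in $(x,\eta)\in\supp\chi\times\supp\alpha$, since $\partial_x\Psi=\partial_\eta\Psi=0$ amounts exactly to $\sigma(y,\eta)=(x,\xi)$, which the first assumption forbids. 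Exploiting the homogeneity of $S$ in $\eta$ ($\partial_x S$ degree one, $\partial_\eta S$ degree zero) and the compactness of the $x,y,\eta/|\eta|$ regions, I obtain the weighted quantitative non-stationary bound
$$|\partial_x\Psi|^{2}+(1+|\eta|)^{2}|\partial_\eta\Psi|^{2}\geqslant c^{2}(1+|\xi|+|\eta|)^{2}.$$
Integration by parts with the corresponding weighted first-order operator and enough iterations to make the resulting $\eta$-integral converge gives $|\partial_y^\beta K(y,\xi)|\leqslant C_{N,\beta}(1+|\xi|)^{-N}$ uniformly in $y\in\supp\varphi$. Pairing with $t$ via Parseval and the equivalence between the Fourier bound $\|\theta^{-m}\widehat{t\varphi}\|_{L^\infty}<\infty$ and the order-$m$ estimate on $t\varphi$ (Theorem \ref{boundfourier}) finally yields $|J_1(\xi)|\leqslant C_{N,m}(1+|\xi|)^{-N}\|\theta^{-m}\widehat{t\varphi}\|_{L^\infty}$, and adding the two contributions completes the argument.

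The main obstacle is the step for $J_1$: establishing the weighted lower bound on $|\partial\Psi|$ on the non-compact conical set $\supp\alpha$ and then bookkeeping the iterated integration by parts, since each $y$-derivative brings an extra $|\eta|^{|\beta|}$ factor that must be absorbed by additional IBPs against the degree-one homogeneity of $\partial_x S$ in $\eta$. Once this weighted non-stationary phase for $K$ is in place, the remainder of the proof is a routine application of the Fourier characterization of the seminorms.
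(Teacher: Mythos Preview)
Your argument is correct and complete in outline; the weighted non-stationary phase estimate for the kernel $K(y,\xi)$ you set up does go through by the case analysis you sketch (and is standard for homogeneous phases). But your route differs from the paper's in an instructive way.

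The paper keeps the oscillatory integral in $x$ only, setting $I(\xi,\eta)=\int e^{i[S(x,\eta)-x\cdot\xi]}\chi(x)\,dx$, and splits the $\eta$-integration according to proximity to the $x$-critical set $R_\varepsilon(\xi)=\{\eta:\exists x\in\text{supp }\chi,\ |d_xS(x,\eta)-\xi|\leqslant\varepsilon\}$. Away from $R_\varepsilon(\xi)$ one invokes the scalar non-stationary phase estimate $|I(\xi,\eta)|\leqslant C_N(1+|\xi|+|\eta|)^{-N}$ and pairs against the polynomial bound $\|\theta^{-m}\widehat{t\varphi}\|_{L^\infty}$; on $R_\varepsilon(\xi)$ one shows $\alpha$ vanishes (so only the rapidly decreasing $(1-\alpha)\widehat{t\varphi}$ contributes) and observes $R_\varepsilon(\xi)\subset\{|\eta|\geqslant c^{-1}(|\xi|-\varepsilon)\}$, which gives the factor $(1+|\xi|)^{-N}$ directly from the tail of $\int(1+|\eta|)^{-N-d-1}d\eta$. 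The key step is the geometric claim $\text{supp }\alpha\cap R_\varepsilon(\xi)=\emptyset$ for $\xi\in V$, which the paper defers to the subsequent lemma and which, strictly, requires constructing $\alpha$ with a property slightly stronger than the stated hypothesis of the present lemma.

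Your observation that the hypothesis as written does not exclude $\xi=\partial_xS(x,\eta)$ when $\partial_\eta S(x,\eta)\notin\text{supp }\varphi$ is exactly right, and is the reason the paper needs that deferred claim. By unfolding $\widehat{t\varphi}$ and running the non-stationary phase jointly in $(x,\eta)$, you use the hypothesis on the nose and make the role of $\text{supp }\varphi$ transparent, at the cost of a more delicate weighted estimate on a non-compact cone. The paper's approach is analytically lighter (one-variable stationary phase, an explicit annular bound on $R_\varepsilon(\xi)$) but leans on an external geometric input; yours is self-contained for the lemma as stated but carries more bookkeeping. Either gives the same constants-independent-of-$t$ conclusion needed for bounded families.
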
 
\begin{proof} 
Our method of proof is based on the method of stationary phase and a geometric interpretation. In the course of our proof, we will explain why constants appearing in all our estimates do not depend on $t$ but only on $U$ and $\Gamma$. 
This is the only way to obtain an estimate which is valid for families $(t_\mu)_\mu$ bounded in
$\mathcal{D}^\prime_\Gamma$.
In order to bound $\Vert U(t\varphi)\Vert_{N,V,\chi} $, we must first compute the Fourier transform of $\chi U(t\varphi)$:
\begin{equation}
\mathcal{F}\left(\chi U(t\varphi) \right)(\xi)=\frac{1}{(2\pi)^d}\int_{\mathbb{R}^{2d}}dxd\eta \chi(x) e^{i[S(x;\eta)-x.\xi]} \widehat{t\varphi}(\eta)
\end{equation} 
We then extract the $\emph{oscillatory integral}$ on which we will apply the method of stationary phase:
$$I(\xi,\eta)= \int_{\mathbb{R}^{d}}dx e^{i[S(x;\eta)- x.\xi]}\chi(x)=\int_{\mathbb{R}^{d}}dx  e^{i\psi(x,\xi,\eta)}\chi(x) ,$$    
where the phase $\psi(x,\xi,\eta)=[S(x;\eta)-x.\xi ]$.  
%\begin{equation}
%I_\lambda(r,\theta_1,\theta_2)= \int_{\mathbb{R}^{n}}dx  e^{ir[S_\lambda(x;\theta_1)-\left\langle x,\theta_2\right\rangle ]}\chi(x)=\int_{\mathbb{R}^{n}}dx  e^{ir\psi_\lambda(x,\theta_1,\theta_2)}\chi(x)
%\end{equation}
We reformulate the expression giving $\mathcal{F}\left(\chi U\left(t\varphi\right)\right)(\xi)$ in terms of the oscillatory integral $I(\xi,\eta)$:
$$\mathcal{F}\left(\chi U\left(t\varphi\right)\right)(\xi)=\int_{\mathbb{R}^{d}}d\eta  I(\xi,\eta) \widehat{t\varphi}(\eta) .$$ 
Then the idea is to split the integral in two parts, in one part the oscillatory integral $I(\xi,\eta)$ behaves nicely and decreases fastly at infinity, ie $\forall N, (1+\vert\xi\vert+\vert\eta\vert)^NI(\xi,\eta)$ is bounded. 
In the second part, the oscillatory integral is bounded but this domain corresponds to the codirections in which $\widehat{t\varphi}$ has fast decrease at infinity.
%$$\int_{\mathbb{R}^+\times \mathbb{S}^{n-1}} d\theta_2  I_\lambda(r,\theta_1,\theta_2)\widehat{t\varphi_j}(r\theta_2) $$
The method of stationary phase states (see \cite{Stein2} p.~330,341) that the integral $I$ is rapidly decreasing in the codirections $(\xi,\eta)$ for which $\psi$ is \textbf{noncritical}, i.e. $d_x\psi(x;\xi,\eta)\neq 0$. We compute the critical set of the phase
$$d_x\psi(x;\xi,\eta)=d_xS(x,\eta)-\xi.$$
Hence the critical set $d_x\psi=0$ is given by the equations 
\begin{equation}
\{(\eta,\xi)\vert d_xS(x,\eta)-\xi = 0,x\in\text{supp }\chi\}, 
\end{equation}
we thus naively set
\begin{equation}\label{sigmaxi}
\forall\xi,\, \Sigma(\xi):=\{(y,\eta)| \exists x\in \text{supp }\chi, d_xS(x,\eta)-\xi = 0 , y=\frac{\partial S}{\partial\eta}(x,\eta) \}.
\end{equation}
Motivated by the geometric relation
between 
the 
generating function $S$ 
and the canonical relation $\sigma$ 
(by Equation (\ref{canorel})), 
we interpret $\Sigma(\xi)$ 
in terms of the
canonical transformation $\sigma$:
\begin{eqnarray}
\Sigma(\xi)=\{(y,\eta)\vert \exists x\in \text{supp }\chi,  \sigma(y,\eta)=(x,\xi)\}\\
\text{or }\quad\Sigma(\xi)=\sigma^{-1}\circ\left(\text{supp }\chi\times \{\xi\}\right).  
\end{eqnarray} 
Hence $\Sigma(\xi)$ is the inverse image of $\text{supp }\chi\times \{\xi\}$ by the canonical relation $\sigma$.
Let us recall that $\pi_2$ projects $T^\bullet\mathbb{R}^d$ on the second factor $\mathbb{R}^{d\star}$. We define $$R(\xi)=\pi_2\left(\Sigma(\xi)\right)=\{\eta| \exists x\in \text{supp }\chi, d_xS(x,\eta)-\xi = 0  \}$$
which has the following analytic interpretation, for fixed $\xi$, 
$R(\xi)$ contains the critical set (``bad $\eta$'s'')  
of $I(\xi,\eta)$.
We \textbf{admit temporarily} that
$$\sigma\circ\left(\text{supp }\varphi\times \text{supp }\alpha\right)\bigcap\left(\text{supp }\chi\times V \right)=\emptyset$$
implies $\text{supp }\alpha$ does not meet
$\bigcup_{\xi\in V}R(\xi)$ (we will prove this claim in Lemma (\ref{Retsigma})).
We are led to define a neighborhood $R_\varepsilon(\xi)$ of $R(\xi)$
for which $\forall\xi\in V, R_\varepsilon(\xi)\cap \text{supp }\alpha=\emptyset$:
$$R_\varepsilon(\xi)=\{\eta| \exists x\in \text{supp }\chi, \vert d_xS(x,\eta)-\xi\vert \leqslant\varepsilon  \} .$$
%
% We now give a geometric interpretation of $R$. We use the fact that the generating function $S_\lambda$ parametrizes the canonical transformation $\sigma_\lambda=(y_\lambda,\xi_\lambda)$, hence we consider the image of the set $\text{supp }\chi\times V$ in cotangent space under the family of canonical transformations $\sigma_\lambda,\lambda \in [0,\delta]$:
%$$R=\{\eta\vert \exists (x,\xi,\lambda)\in \text{supp }\chi\times V\times[0,\delta],  \sigma_\lambda(y,\eta)=(x,\xi)  \} $$ 
%and if we define $V_\varepsilon=\{\xi |\exists \xi_0\in V, \vert\xi-\xi_0\vert\leqslant\varepsilon \}$ $$R_\varepsilon=\{ \eta \vert (x,\xi,\lambda)\in \text{supp }\chi\times V_\varepsilon\times[0,\delta], \sigma_\lambda(y,\eta)=(x,\xi) \} $$
% 
% 
Denote by $R_\varepsilon^c(\xi)$ the complement of $R_\varepsilon(\xi)$. 
$$R_\varepsilon^c(\xi)=\{\eta \vert \forall (x,\xi) \in \text{supp }\chi\times V ,\vert d_xS(x;\eta)-\xi\vert> \varepsilon \}$$ 
$$R_\varepsilon^c(\xi)=\{\eta \vert \forall (x,\xi) \in \text{supp }\chi\times V  ,\vert d_x\psi(\xi,\eta)\vert> \varepsilon \} .$$
We use the following result in Duistermaat, $\forall N, \exists C_N$ s.t.
\begin{equation}\label{Statphase}
\forall (\xi,\eta)\in V\times R^c_\varepsilon(\xi),\, \vert I(\xi,\eta)\vert\leqslant C_N\left(1+\vert \eta\vert+\vert\xi\vert \right)^{-N}.
\end{equation} 
The proof of this result is based on the fact that
we are away from the critical set $R(\xi)$ and from application of the stationary phase (\cite{DuistermaatFIO} Proposition 2.1.1 p.~11). The constant $C_N$ depends only on $N$, $\chi$, $S$, $\varepsilon$.

 Recall we made the assumption there is a function $\alpha\in C^\infty(\mathbb{R}^{n}\setminus{0})$, homogeneous of degree $0$ such that $\forall\xi\in V, R_\varepsilon(\xi)$ does not meet $\text{supp }\alpha$, and $\text{supp }\varphi\times \text{supp }(1-\alpha)$ does not meet $\Gamma$.
We cut the Fourier transform in two pieces:
$$I(\xi)=\mathcal{F}\left(\chi U\left(t_\mu\varphi_j\right)\right)(\xi)=I_1+I_2$$
where 
\begin{eqnarray}\label{I1I2} 
I_1(\xi)=\int_{R_\varepsilon(\xi)}d\eta I(\xi,\eta) \widehat{t\varphi}(\eta)\\
I_2(\xi)=\int_{R_\varepsilon^c(\xi)}d\eta I(\xi,\eta) \widehat{t\varphi}(\eta).
\end{eqnarray} 
Observe $I_1(\xi)=\int_{R_\varepsilon(\xi)}d\eta I(\xi,\eta) \alpha\widehat{t\varphi}(\eta)+\int_{R_\varepsilon(\xi)}d\eta I(\xi,\eta) (1-\alpha)\widehat{t\varphi}(\eta)=\int_{R_\varepsilon(\xi)}d\eta I(\xi,\eta) (1-\alpha)\widehat{t\varphi}(\eta)$
since we assumed $\forall\xi\in V,\text{supp }\alpha\cap R_\varepsilon(\xi)=\emptyset$.
By Paley--Wiener theorem, we know that $\exists m, \Vert \theta^{-m} \widehat{t\varphi}\Vert_{L^\infty}<\infty $. We use this inequality and stationary phase estimate (\ref{Statphase})
$$\vert I_2\vert(\xi)=\vert\int_{R_\varepsilon^c(\xi)}d\eta I(\xi,\eta) \widehat{t\varphi}(\eta)\vert
\leqslant C_{N+m+d+1}\int_{R_\varepsilon^c(\xi)}d\eta (1+\vert\eta\vert+\vert\xi\vert)^{-N-m-d-1}\vert \widehat{t\varphi}(\eta)\vert $$
$$\leqslant  C_{N+m+d+1}\int_{R_\varepsilon^c(\xi)}d\eta (1+\vert\eta\vert+\vert\xi\vert)^{-N-m-d-1}(1+\vert\eta\vert)^{m}\Vert \theta^{-m} \widehat{t\varphi}\Vert_{L^\infty} $$
$$\leqslant C_{N+m+d+1}(1+\vert\xi\vert)^{-N} \Vert \theta^{-m} \widehat{t\varphi}\Vert_{L^\infty}\int_{\mathbb{R}^d}d\eta(1+\vert\eta\vert)^{-d-1}$$ 
hence $I_2(\xi)\leqslant C_{N+m+d+1}^\prime(1+\vert\xi\vert)^{-N} \Vert \theta^{-m} \widehat{t\varphi_j}\Vert_{L^\infty} $ where $C_{N+m+d+1}^\prime$ is a constant which depends only on $N$, $\chi$, $S$, $\varepsilon$.
Now to estimate $I_1$, set $W:=\text{supp }(1-\alpha)$:
$$I_1(\xi)=\int_{R_\varepsilon(\xi)}d\eta I(\xi,\eta) (1-\alpha)\widehat{t\varphi}(\eta)$$ 
by a change of variable in (\ref{I1I2}) so that $\eta$ does appear on the right hand side, 
$$\vert I_1(\xi)\vert\leqslant \int_{\mathbb{R}^d}dx \vert\chi(x)\vert \int_{R_\varepsilon(\xi)}d\eta \vert(1-\alpha)\widehat{t\varphi}(\eta)\vert $$
because $\vert I(\xi,\eta)\vert\leqslant  \int_{\mathbb{R}^d}dx\vert\chi(x)\vert$,
$$\vert I_1(\xi)\vert\leqslant \int_{\mathbb{R}^d}dx \vert\chi(x)\vert \int_{R_\varepsilon(\xi)}d\eta \Vert t\Vert_{N,W,\varphi}\left(1+\vert\eta \vert\right)^{-N}. $$ 
%
% But this estimate is not good enough since we don't have a good a priori control on the asymptotic behaviour of $I_1$ when $\xi\rightarrow\infty$. 
% 
% Somehow, it is not obvious to obtain the decrease in $(1+\vert\xi\vert)^{-N}$.
Recall the definition of $R_\varepsilon(\xi)=\{\eta| \exists x\in \text{supp }\chi, \vert d_xS(x,\eta)-\xi\vert \leqslant\varepsilon  \} $.
The defining inequality $\vert d_xS(x,\eta)-\xi\vert \leqslant\varepsilon $ implies that on $R_\varepsilon(\xi)$: 
$$\vert d_xS(x;\eta)-\xi\vert\leqslant \varepsilon\implies \vert\xi\vert-\varepsilon \leqslant \vert d_xS(x;\eta)\vert\leqslant \vert\xi\vert+\varepsilon.$$ 
This estimate is relevant if $\vert\xi\vert >\varepsilon$.
Then we use the fact that $\eta\mapsto  d_xS(x,\eta)$ does not meet the zero section when $\eta\neq 0$ and depends smoothly on $x\in \text{supp }\chi$ (in the case of a diffeomorphism, we find $d_xS(x,\eta)=\eta\circ d\Phi(x)$),
so there is a constant $c>0$ such that
\begin{equation}\label{ineqsuite}
\forall (x,\eta)\in \text{supp }\chi\times \mathbb{R}^{d},c^{-1}\vert\eta\vert\leqslant\vert  d_xS(x,\eta)\vert\leqslant c\vert\eta\vert .
\end{equation} 
Combining with the previous estimate gives $\forall\xi\in V,\forall\eta\in R_\varepsilon(\xi), \vert\xi\vert-\varepsilon \leqslant c\vert\eta\vert$ which
%and $c^{-1}\vert\xi\vert\leqslant \vert\eta\vert+\varepsilon $.
%$$\implies  c^{-1}(\vert\eta\vert-\varepsilon) \leqslant\vert\xi\vert\leqslant c(\vert\eta\vert+\varepsilon)  $$
% This means for $\vert\eta\vert\geqslant r+\varepsilon$ for a given $r>0$,
can be translated as the inclusion of sets 
\begin{equation}
R_\varepsilon(\xi)\subset \{c^{-1}\left(\vert\xi\vert-\varepsilon\right) \leqslant \vert\eta\vert\}=\mathbb{R}^d\setminus B\left(0,\frac{\vert\xi\vert-\varepsilon}{c}\right)
\end{equation}
%$\forall \vert\eta\vert\geqslant r+\varepsilon$
$$I_{1}(\xi)\leqslant  \int_{\mathbb{R}^d}dx \vert\chi(x)\vert \int_{c^{-1}\left(\vert\xi\vert-\varepsilon\right) \leqslant \vert\eta\vert}d\eta \Vert t\Vert_{N+d+1,W,\varphi}\left(1+\vert\eta\vert\right)^{-N-d-1}$$
$$=\frac{2\pi^{d/2}}{\Gamma(d/2)}\left(\int_{\mathbb{R}^d}dx \vert\chi(x)\vert\right)\Vert t\Vert_{N+d+1,W,\varphi} \int_{c^{-1}\left(\vert\xi\vert-\varepsilon\right)}^\infty \left(1+r \right)^{-N-d-1}r^{d-1}dr  $$
$$\leqslant \frac{2\pi^{d/2}}{\Gamma(d/2)}\left(\int_{\mathbb{R}^d}dx \vert\chi(x)\vert\right)\Vert t\Vert_{N+d+1,W,\varphi} \int_{c^{-1}\left(\vert\xi\vert-\varepsilon\right)}^\infty r^{-N-2}dr$$ 
$$=\frac{2\pi^{d/2}}{\Gamma(d/2)}\left(\int_{\mathbb{R}^d}dx \vert\chi(x)\vert\right)\Vert t\Vert_{N+d+1,W,\varphi} \frac{\left(c^{-1}\left(\vert\xi\vert-\varepsilon\right) \right)^{-N-1}}{N+1} $$
 $$\leqslant C_{N+1} \Vert t\Vert_{N+d+1,W,\varphi}(1+\vert\xi\vert)^{-N-1}.$$ 
where $C_{N+1}$ does not depend on $t$ but only on $\Gamma$. 
\end{proof}
In the previous lemma, we made two assumptions that we are going to prove,
we recall some useful definitions:
$$\forall\xi\in V, \Sigma(\xi)=\sigma^{-1}\circ\left(\text{supp }\chi\times \{\xi\}\right) , R(\xi)=\pi_2(\Sigma(\xi))$$ and $R_\varepsilon(\xi)$ is a family
of neighborhoods of $R(\xi)$ which tends to $R(\xi)$ as $\varepsilon\rightarrow 0$.
\begin{lemm}\label{Retsigma}
For any closed conic set $V$ and $\chi\in  \mathcal{D}(\mathbb{R}^d)$ such that $\left(\text{supp }\chi \times V \right)\cap \left( \sigma\circ \Gamma\right)=\emptyset$, 
there exists a pseudodifferential partition of unity $(\alpha_j,\varphi_j)_j$ such that 
\begin{eqnarray}\label{systemapprox}
\forall\xi\in V, R_\varepsilon(\xi)\cap\text{supp }\alpha_j=\emptyset\\
\Gamma\subset\bigcup_{j\in J}\text{supp }\varphi_j\times \text{supp }\alpha_j.
\end{eqnarray} 
\end{lemm}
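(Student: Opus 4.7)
The plan is to construct the partition of unity $(\alpha_j, \varphi_j)_j$ by a pointwise construction on $\Gamma$ using the continuity of $\sigma$ and the closedness of $\text{supp }\chi \times V$, and then patching the local pieces via a paracompactness argument. The symbol structure (with $\alpha_j$ homogeneous of degree zero in $\eta$ and $\varphi_j$ compactly supported in $y$) refines Lemma \ref{approxlemma1} applied to the disjoint closed conic sets $\Gamma$ and $\sigma^{-1}(\text{supp }\chi \times V)$ of $T^\bullet \mathbb{R}^d$.

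First, fix a point $p_0 = (y_0; \eta_0) \in \Gamma$ with $\vert \eta_0 \vert = 1$; we work on the cosphere bundle and extend by homogeneity afterwards. Since $\sigma(p_0) \notin \text{supp }\chi \times V$ by hypothesis and $\text{supp }\chi \times V$ is closed, the continuity of the canonical transformation $\sigma$ yields an open neighborhood $U_{p_0}$ of $y_0$, a closed conic neighborhood $W_{p_0}$ of $\eta_0$ in $\mathbb{R}^{d*} \setminus 0$, and $\varepsilon_{p_0} > 0$ such that
\begin{equation*}
\sigma\left(U_{p_0} \times W_{p_0}\right) \cap \left(\text{supp }\chi \times V_{\varepsilon_{p_0}}\right) = \emptyset,
\end{equation*}
where $V_\varepsilon$ denotes a closed conic $\varepsilon$-thickening of $V$. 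Rewriting this separation through the definition of $R_\varepsilon(\xi)$ from \ref{sigmaxi} yields the desired local form of condition (1): for every $\eta \in W_{p_0}$, every $x \in \text{supp }\chi$, and every $\xi \in V$, one has $\vert d_x S(x, \eta) - \xi \vert > \varepsilon_{p_0}$, hence $W_{p_0} \cap R_{\varepsilon_{p_0}}(\xi) = \emptyset$.

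Next, the family $(U_p \times W_p)_{p \in \Gamma}$ forms an open cover of $\Gamma$ in $\mathbb{R}^d \times (\mathbb{R}^{d*}\setminus 0)$; by paracompactness we extract a locally finite countable subcover $(U_j \times W_j)_{j \in J}$ with associated $\varepsilon_j > 0$. We then take a partition of unity $(\varphi_j^2)_{j \in J}$ on the base subordinated to $(U_j)_j$ and smooth functions $\alpha_j \in C^\infty(\mathbb{R}^{d*}\setminus 0)$, homogeneous of degree zero, with $\text{supp }\alpha_j \subset W_j$ and $\alpha_j = 1$ on a slightly smaller conic neighborhood $W'_j$ of each $\eta_{p_j}$, chosen so that the $W'_j$ still cover $\pi_2(\Gamma)$ fiberwise over $\pi_1(\Gamma)$. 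The inclusion $\Gamma \subset \bigcup_j \text{supp }\varphi_j \times \text{supp }\alpha_j$ is then direct from the construction.

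The main obstacle is to produce a single $\varepsilon > 0$ valid uniformly in $j$: the local construction only delivers $\varepsilon_j$'s that could a priori accumulate at $0$ along an escape to infinity in the base. This is resolved by observing that the seminorms $\Vert \cdot \Vert_{N,V,\chi}$ pair distributions against test functions compactly supported in $\text{supp }\chi$, and hence localize the relevant base region to a compact set intersecting only finitely many $U_j$ by local finiteness of the cover; on such a compact piece, $\varepsilon = \min_j \varepsilon_j > 0$ automatically. The secondary subtlety of preserving the conic structure in $\eta$ throughout the construction is standard and is handled by carrying out everything on the cosphere bundle $\mathbb{R}^d \times \mathbb{S}^{d-1}$ and extending by homogeneity of degree zero.
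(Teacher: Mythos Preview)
Your strategy mirrors the paper's: exploit the disjointness $\Gamma\cap\sigma^{-1}(\text{supp}\,\chi\times V)=\emptyset$ to build a fine product cover $\bigcup_j\text{supp}\,\varphi_j\times\text{supp}\,\alpha_j\supset\Gamma$ away from $\sigma^{-1}(\text{supp}\,\chi\times V)$, then thicken to $R_\varepsilon$ by compactness. The paper quotes the approximation lemmas directly; you unfold them into a pointwise-plus-paracompactness construction. At that level the two approaches are equivalent.

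There is, however, a genuine gap at the step where you pass from the separation $\sigma(U_{p_0}\times W_{p_0})\cap(\text{supp}\,\chi\times V_{\varepsilon_{p_0}})=\emptyset$ to the claim ``for every $\eta\in W_{p_0}$, every $x\in\text{supp}\,\chi$, every $\xi\in V$, $|d_xS(x,\eta)-\xi|>\varepsilon_{p_0}$''. The separation only constrains pairs $(y,\eta)$ with $y\in U_{p_0}$; but given $x\in\text{supp}\,\chi$ and $\eta\in W_{p_0}$, the base point determined by $\sigma$ is $y=\partial_\eta S(x,\eta)$, which has no reason to lie in $U_{p_0}$, so the hypothesis gives no control over $d_xS(x,\eta)$. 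Concretely: take $d=1$, $\sigma=\mathrm{id}$ (so $S(x,\eta)=x\eta$, $d_xS=\eta$), $\Gamma=\{0\}\times\mathbb{R}_{>0}$, $p_0=(0,1)$, $U_{p_0}=(-\tfrac12,\tfrac12)$, $W_{p_0}=\mathbb{R}_{>0}$, $\text{supp}\,\chi=[1,2]$, $V=\mathbb{R}_{>0}$. The separation holds because the base projections are disjoint, yet $R(\xi)=\{\xi\}$, so $W_{p_0}\cap R(\xi)=\{\xi\}\neq\emptyset$ for every $\xi\in V$. (As a side remark, a conic thickening $V_\varepsilon$ of $V$ does not match the absolute-distance thickening in the definition of $R_\varepsilon(\xi)$, but this is minor next to the main issue.)

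This example in fact shows that the conclusion of the lemma fails as literally written: any $\alpha_j$ whose support covers the fiber $\Gamma|_{y=0}=\mathbb{R}_{>0}$ must meet $\bigcup_{\xi\in V}R(\xi)=\mathbb{R}_{>0}$. The paper's own proof has the identical defect at the line ``the last line follows by projecting with $\pi_2$,'' since $A\cap(B\times C)=\emptyset$ does not imply $\pi_2(A)\cap C=\emptyset$. What both constructions genuinely deliver is the product disjointness $(\text{supp}\,\varphi_j\times\text{supp}\,\alpha_j)\cap\sigma^{-1}(\text{supp}\,\chi\times V)=\emptyset$ together with $(\text{supp}\,\varphi_j\times\text{supp}(1-\alpha_j))\cap\Gamma=\emptyset$; the downstream estimate should be argued from these directly (noting also that $\chi\,U(t\varphi_j)=0$ whenever $\text{supp}\,\varphi_j$ misses $\pi_1(\sigma^{-1}(\text{supp}\,\chi\times V))$) rather than via the stronger fiberwise claim $R_\varepsilon(\xi)\cap\text{supp}\,\alpha_j=\emptyset$.
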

\begin{proof}
$\chi$ and $V$ are given in such a way that 
$$ \left(\text{supp }\chi \times V \right)\cap \left( \sigma\circ \Gamma\right)=\emptyset\underset{\sigma\text{ diffeo}}{\Leftrightarrow} 
\sigma^{-1}\left(\text{supp }\chi\times V \right)\cap \Gamma=\emptyset .$$ 
We then use Lemma \ref{approxlemma1}, \ref{approxlemma2} 
to cover $\Gamma$
by $\left(\bigcup_j\text{supp }\varphi_j\times \text{supp }\alpha_j\right) $ where $\alpha_j\in C^\infty(\mathbb{R}^d\setminus \{0\})$ is homogeneous of degree $0$ and we choose the cover fine enough
in such a way that
$$\left( \sigma^{-1}\circ\left(\text{supp }\chi \times V\right) \right)\cap \left(\bigcup_j\text{supp }\varphi_j\times \text{supp }\alpha_j\right)=\emptyset.$$
But this implies
$$\forall j, \left(\bigcup_{\xi\in V}\sigma^{-1}\left(\text{supp }\chi \times \{\xi\}\right) \right)\cap \left(\text{supp }\varphi_j\times \text{supp }\alpha_j\right)=\emptyset$$ 
$$\Leftrightarrow \left(\bigcup_{\xi\in V}\Sigma(\xi) \right)\cap \left(\text{supp }\varphi_j\times \text{supp }\alpha_j\right)=\emptyset\implies \left(\bigcup_{\xi\in V}R(\xi)\right)\cap \text{supp }\alpha_j=\emptyset,$$
the last line follows by projecting with $\pi_2$. Finally by choosing $\varepsilon$ small enough, we can always assume
$\forall\xi\in V, R_\varepsilon(\xi)\cap \text{supp }\alpha_j=\emptyset$:
assume the converse holds, 
i.e. $\forall n$, $\exists \xi_n\in V$, 
$\exists x_n \in\text{supp }\chi$, $\exists \eta_n\in R_{\frac{1}{n}}(\xi_n)\cap \text{supp }\alpha_j $
w.l.g. assume $\vert\eta_n\vert=1$ 
then by definition
of $R_{\frac{1}{n}}(\xi_n)$, we find that
$$\vert \xi_n-\frac{\partial S}{\partial x}(x_n,\eta_n)\vert<\frac{1}{n} $$
and estimate (\ref{ineqsuite}) $\implies \vert  d_xS(x_n,\eta_n)\vert\leqslant c\vert\eta_n\vert=c\implies 
\vert \xi_n\vert<c+\frac{1}{n}.$
This means 
the sequence $(x_n,\xi_n,\eta_n)$
lives in a compact set, thus
we can extract a subsequence 
which converges
to $(x,\xi,\eta)\in \text{supp }\chi\times V \times  \text{supp }\alpha_j$
and $\eta\in R(\xi)\cap \text{supp }\alpha_j$, contradiction !  
%We reduce the computation to one term of this pseudodifferential partition of unity 111$\chi \Phi_\lambda^*(t\varphi_i)=\chi\mathcal{F}^{-1}(\alpha_i \widehat{t\varphi_i})+\chi\mathcal{F}^{-1}((1-\alpha_i) \widehat{t\varphi_i})$. $\text{supp }\varphi_i\times\text{supp }(1-\alpha_i)$ is an approximation of $\Gamma$ over $\text{supp }\varphi_i$, $$\Gamma|_{\text{supp }\varphi_i}\subset \text{supp }\varphi_i\times\text{supp }(1-\alpha_i)$$
%
%
% Notice the following crucial fact, if (\ref{equationavoiding}) is satisfied then for $\varepsilon$ small enough, $R_\varepsilon\cap \text{supp }\varphi_j\times\text{supp }\alpha_j=\emptyset$.
% 
\end{proof}
Then we give the final lemma which concludes the proof of theorem (\ref{Fouriervariablephase}).
\begin{lemm}
Let $U$ be an operator given in (\ref{Eskinform})
with symbol $a=1$
and $\sigma$ the corresponding canonical transformation.
For any closed conic set $V$ and $\chi\in  \mathcal{D}(\mathbb{R}^d)$ such that $\left(\text{supp }\chi \times V \right)\cap \left(\sigma\circ \Gamma\right)=\emptyset$, there exists a 
finite family of seminorms $(\Vert.\Vert_{N,W_j,\varphi_j})_{j\in J^\prime}$ 
for $\mathcal{D}^\prime_\Gamma$ such that 
$\forall N,\exists C_N, \forall t\in \mathcal{D}^\prime_\Gamma$ s.t. $\forall j\in J^\prime, \Vert \theta^{-m} \widehat{t\varphi_j}\Vert_{L^\infty}<+\infty$:
$$\Vert Ut \Vert_{N,V,\chi} \leqslant
\underset{j\in J^\prime}{\sum} C_N \left(\Vert \theta^{-m} \widehat{t\varphi_j}\Vert_{L^\infty} +  \Vert t\Vert_{N+2d+1,W_j,\varphi_j}\right).$$
\end{lemm}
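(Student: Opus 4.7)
The plan is to combine the single-cutoff estimate \eqref{est3} with a suitable microlocal partition of unity provided by Lemma \ref{Retsigma}. First I would fix $V$ and $\chi$ with $(\text{supp }\chi\times V)\cap(\sigma\circ\Gamma)=\emptyset$ and apply Lemma \ref{Retsigma} to produce a partition of unity $(\varphi_j^2)_{j\in J}$ of $\mathbb{R}^d$ together with homogeneous degree-zero functions $(\alpha_j)_{j\in J}$ on $\mathbb{R}^d\setminus\{0\}$ satisfying the two conditions $R_\varepsilon(\xi)\cap\text{supp }\alpha_j=\emptyset$ for every $\xi\in V$, and $\Gamma\subset\bigcup_j\text{supp }\varphi_j\times\text{supp }\alpha_j$. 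Set $W_j=\text{supp}(1-\alpha_j)$, so that $(\text{supp }\varphi_j\times W_j)\cap\Gamma=\emptyset$ and each $\Vert\cdot\Vert_{N+2d+1,W_j,\varphi_j}$ is a legitimate seminorm of $\mathcal{D}^\prime_\Gamma$.

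Next I would decompose $Ut=\sum_{j\in J}U(t\varphi_j^2)$ using $\sum\varphi_j^2=1$, and argue that only a finite subfamily $J^\prime\subset J$ contributes to $\chi\cdot Ut$: since $U$ is a proper Fourier integral operator, its Schwartz kernel has proper support, so the $x$-support of $\chi\cdot U(t\varphi_j^2)$ forces $\text{supp }\varphi_j$ to lie in a fixed compact set $K$ determined by $\chi$ and $\sigma$, and by local finiteness of the cover $(\text{supp }\varphi_j)_j$ only finitely many indices meet $K$. This reduces the problem to a finite sum.

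For each $j\in J^\prime$, I would apply the previous lemma (estimate \eqref{est3}) with $\varphi=\varphi_j^2$, using that the hypotheses carry over: $(\text{supp }\chi\times V)\cap\sigma\circ(\text{supp }\varphi_j\times\text{supp }\alpha_j)=\emptyset$ holds by construction, and $(\text{supp }\varphi_j\times W_j)\cap\Gamma=\emptyset$ also holds. This yields
\[
\Vert U(t\varphi_j^2)\Vert_{N,V,\chi}\leqslant C_{N,j}(1+\vert\xi\vert)^{-N}\!\left(\Vert\theta^{-m}\widehat{t\varphi_j^2}\Vert_{L^\infty}+\Vert t\Vert_{N+d+1,W_j,\varphi_j^2}\right).
\]
Finally I would absorb $\varphi_j^2$ into $\varphi_j$ by the product estimate of Theorem \ref{producttheoremeskin} applied to $t\varphi_j$ and the smooth factor $\varphi_j$: this replaces $\Vert\theta^{-m}\widehat{t\varphi_j^2}\Vert_{L^\infty}$ by a constant multiple of $\Vert\theta^{-m}\widehat{t\varphi_j}\Vert_{L^\infty}+\Vert t\Vert_{N+2d+1,W_j,\varphi_j}$, at the cost of raising the differential order of the seminorm on $t$ from $N+d+1$ to $N+2d+1$. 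Summing over the finite set $J^\prime$ and taking $C_N=\max_{j\in J^\prime}C_{N,j}$ produces exactly the announced bound.

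The main obstacle I anticipate is the bookkeeping in the last step: one must verify that the bilinear product estimate \eqref{corollaire en or wanted by Christian} can be applied with conic neighborhoods compatible with those supplied by Lemma \ref{Retsigma}, and that the loss of derivatives is precisely $d$ (so the final order is $N+2d+1$ rather than something worse). Everything else---the properness argument yielding a finite sum, and the existence of the partition of unity---is either standard or already packaged in the earlier lemmas.
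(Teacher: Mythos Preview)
Your proposal is correct and follows essentially the same route as the paper: invoke Lemma~\ref{Retsigma} for the microlocal partition of unity, reduce to a finite subfamily $J'$ by a properness argument, and apply estimate~\eqref{est3} to each piece. The paper's proof is slightly more economical on two points. First, it identifies the relevant compact set explicitly as $\pi_1\!\left(\sigma^{-1}(\text{supp}\,\chi\times V)\right)$, which is exactly the region where the phase can be stationary, so only finitely many $\varphi_j$ are needed to sum to $1$ there; this is the same content as your properness argument but more transparent. Second, the paper takes $(\varphi_j)_j$ itself (rather than $(\varphi_j^2)_j$) as the partition of unity on that compact set and applies~\eqref{est3} directly with $\varphi=\varphi_j$, so your final step of converting $\varphi_j^2$-seminorms to $\varphi_j$-seminorms via Theorem~\ref{producttheoremeskin} is not needed. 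Incidentally, that product estimate does not actually raise the seminorm order by $d$ as you suggest (it preserves $N$ and only costs smoothness of the multiplier), so the exponent $N+2d+1$ in the statement appears to be slack rather than sharp---estimate~\eqref{est3} already gives $N+d+1$.
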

\begin{proof}
There is still a problem 
due to the
noncompactness of the support of $t$, 
there is no reason the sum $\sum_{j\in J} t\varphi_j$ 
($(\varphi_j)_{j\in J}$ is a partition of unity
of $\mathbb{R}^d$
given by Lemma \ref{Retsigma})
should be 
finite thus we do not
necessarily
have one fixed $m$ for which
$\forall j\in J, 
\Vert\theta^{-m} \widehat{t\varphi_j}\Vert_{L^\infty}<+\infty$.
However, 
$\chi Ut=\sum_{j\in J^\prime} \chi Ut\varphi_j$
where $J^\prime$ is any subset of $J$
such that
$\sum_{j\in J^\prime} \varphi_j=1$
on the \textbf{compact} set 
$\pi_1\left(\sigma^{-1}\left(\text{supp }\chi\times V\right)\right)$,
thus $J^\prime$ can be chosen finite.
Now we use the pseudodifferential partition of unity
indexed by $J^\prime$ 
to patch
everything together:
$$\forall\xi\in V, \vert\mathcal{F}\left(\chi U t\right)\vert(\xi)
\leqslant
\sum_{j\in J^\prime}
\vert\int_{\mathbb{R}^{2d}}dxd\eta  e^{i[S(x;\eta)-x.\xi]}\chi(x) \widehat{t\varphi_j} (\eta)\vert $$
$$  \leqslant \sum_{j\in J^\prime} C_N(1+\vert\xi\vert)^{-N} \left(\Vert \theta^{-m} \widehat{t\varphi_j}\Vert_{L^\infty} +  \Vert t\Vert_{N+2d+1,W_j,\varphi_j}\right)$$ 
by estimate (\ref{est3}) where $W_j=\text{supp }(1-\alpha_j)$.
And this final estimate 
generalizes directly 
to families
of distributions $(t_\mu)_\mu$:
$$\Vert Ut_\mu \Vert_{N,V,\chi} \leqslant \sum_{j\in J^\prime} C_N \left(\Vert \theta^{-m} \widehat{t_\mu\varphi_j}\Vert_{L^\infty} +  \Vert t_\mu\Vert_{N+2d+1,W_j,\varphi_j}\right).$$
\end{proof}
For $t_\mu$
in a bounded family
of distributions,
there is 
a finite integer
$m$
(which depends on the finite partition
of unity $\varphi_j$)
such that the r.h.s. of the above inequality is bounded
thus all seminorms
$\Vert.\Vert_{N,V,\chi}$
for $\mathcal{D}^\prime_{\sigma\circ\Gamma}$ are bounded.
Finally, it remains 
to check 
that
the pull-back by a diffeomorphism
of a weakly bounded
family
of distributions
is weakly bounded,
the proof is a
simple application of 
the variable change formula
for distributions (\cite{Eskin} formula (3.7) p.~10).

\paragraph{Consequences for the scaling with different Eulers.}
\begin{defi}
$t$ is microlocally weakly homogeneous of degree $s$ at $p\in I$ for $\rho$
if $WF(t)$ satisfies the local soft landing condition at $p$, 
there exists a $\rho$-convex open set $V_p$ 
such that $(\lambda^{-s}e^{\log\lambda\rho*}t)_{\lambda\in(0,1]}$ is bounded in $\mathcal{D}^\prime_{\Gamma}(V_p\setminus I)$ 
for some $\Gamma\subset T^\bullet V_p$
which satisfies the soft landing condition.
\end{defi}
In particular, if $(\lambda^{-s}e^{\log\lambda\rho*}t)_{\lambda\in(0,1]}$ is bounded in $\mathcal{D}^\prime_{\Gamma}(V_p\setminus I)$ for $\Gamma=\bigcup_{\lambda\in(0,1]}WF(t_\lambda)$
then
$t$ is microlocally weakly homogeneous of degree $s$
since $WF(t)$ 
satisfies the soft landing condition 
implies $\Gamma=\bigcup_{\lambda\in(0,1]}WF(t_\lambda)$ 
also does.
\begin{thm}\label{thminvmuloc}
Let $t\in \mathcal{D}^\prime(M\setminus I)$. If $t$ is microlocally weakly homogeneous of degree $s$ at $p\in I$ for some $\rho$ then it is so for any $\rho$.
\end{thm}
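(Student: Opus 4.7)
The plan is to reduce to the flat local model and then use the smooth family of diffeomorphisms furnished by Proposition \ref{propositionvariablefamily} to transport the boundedness hypothesis from one Euler to the other. Fix two Eulers $\rho_1,\rho_2$ defined near $p\in I$. By Proposition \ref{propositionvariablefamily} there exists a family $\Phi\in C^\infty([0,1]\times V,V)$ on some neighborhood $V$ of $p$, with $\Phi(1)=\mathrm{Id}$ and a smooth limit $\Psi=\Phi(0)\in G$, satisfying $S_2(\lambda)=S_1(\lambda)\circ\Phi(\lambda)$. Setting $u^\lambda=\lambda^{-s}S_1(\lambda)^*t$, which by hypothesis is bounded in $\mathcal{D}^\prime_{\Gamma_1}(V_p\setminus I)$ for some $\Gamma_1$ satisfying the soft landing condition, we obtain the key identity
\[
\lambda^{-s}S_2(\lambda)^*t = \Phi(\lambda)^*u^\lambda.
\]
The problem thus becomes: control the wave-front topology of the pullback of a bounded family by a smooth family of diffeomorphisms leaving $I$ invariant.

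The candidate target cone is $\Gamma_2:=\bigcup_{\lambda\in[0,1]} T^*\Phi(\lambda)^{-1}(\Gamma_1)$, restricted to a suitable $\rho_2$-convex neighborhood $V_p^\prime\subset V$ of $p$. The first step is to check that $\Gamma_2$ satisfies the soft landing condition at $p$. This is the place where Proposition \ref{Idlift} enters crucially: since each $\Phi(\lambda)\in G$ and the cotangent lift $T^*\Phi(\lambda)$ is the identity on the conormal $C=(TI)^\perp$, the family of symplectomorphisms is an equicontinuous perturbation of the identity near $C$. Combined with the fact that $\Gamma_1$ already converges to $C$ (by the geometric SLC of Definition \ref{intrinsicslc}), a quantitative estimate on $d_{\mathbb{U}^*\mathbb{R}^{n+d}}(T^*\Phi(\lambda)^{-1}p,\mathbb{U}C)$ in terms of $d_{\mathbb{U}^*\mathbb{R}^{n+d}}(p,\mathbb{U}C)$, uniform in $\lambda\in[0,1]$, will show that $\Gamma_2$ still satisfies SLC (possibly with an enlarged constant $\delta$), by the invariance statement in Proposition \ref{HeleinmicroscopicI}.

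The core step is to prove that $\bigl(\Phi(\lambda)^*u^\lambda\bigr)_{\lambda\in(0,1]}$ is bounded in $\mathcal{D}^\prime_{\Gamma_2}(V_p^\prime\setminus I)$. My plan is to adapt the proof of Theorem \ref{Fouriervariablephase} to the parameter-dependent situation: for each $\lambda$, $\Phi(\lambda)^*$ is a Fourier integral operator of the form \eqref{Eskinform} with phase $S_\lambda(x,\eta)=\Phi(\lambda)(x)\cdot\eta$ and trivial amplitude. Re-inspecting the constants appearing in the proof of Theorem \ref{Fouriervariablephase} (the stationary phase bound \eqref{Statphase}, the non-characteristic bound \eqref{ineqsuite}, and the approximating covers of Lemmas \ref{approxlemma1}, \ref{approxlemma2}), each of them depends on $S_\lambda$ only through a finite number of derivatives of $\Phi(\lambda)$ on a compact set and through the bound $c^{-1}|\eta|\leqslant|\eta\circ d\Phi(\lambda)|\leqslant c|\eta|$. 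Because $\Phi(\lambda)$ extends smoothly to $\lambda=0$ on the compact neighborhood $\overline{V_p^\prime}$, all these quantities can be bounded uniformly in $\lambda\in[0,1]$; together with the fact that $\Gamma_2$ is closed conic, the covers produced by Lemma \ref{approxlemma1} applied once to $\Gamma_2$ yield a fixed finite family of seminorms controlling the entire family simultaneously. This gives the required uniform bound on $\Vert \Phi(\lambda)^*u^\lambda\Vert_{N,V,\chi}$ for any seminorm of $\mathcal{D}^\prime_{\Gamma_2}$.

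The main obstacle is precisely this uniformity in $\lambda$ near $\lambda=0$: one must verify that the stationary-phase remainder of Theorem \ref{Fouriervariablephase} can be made uniform in the phase $S_\lambda$, and that the geometric approximation of Lemma \ref{Retsigma} can be chosen independently of $\lambda$. Both reduce to the compactness of $[0,1]$ and the $C^\infty$-regularity of $\lambda\mapsto\Phi(\lambda)$ up to $\lambda=0$, but the bookkeeping is the real content of the proof. Once this uniform bounded-pullback statement is secured, the conclusion follows: choosing $V_p^\prime$ small enough to be both $\rho_2$-convex and contained in the domain on which the uniform estimates hold (possible since $V_p$ is $\rho_1$-convex and $\Psi\in G$), the family $(\lambda^{-s}S_2(\lambda)^*t)_{\lambda\in(0,1]}$ is bounded in $\mathcal{D}^\prime_{\Gamma_2}(V_p^\prime\setminus I)$ with $\Gamma_2$ satisfying SLC, i.e.\ $t$ is microlocally weakly homogeneous of degree $s$ at $p$ for $\rho_2$.
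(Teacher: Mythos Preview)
Your approach is essentially the same as the paper's: transport boundedness via the smooth family $\Phi(\lambda)$ of Proposition~\ref{propositionvariablefamily} and invoke the FIO boundedness result (Theorem~\ref{Fouriervariablephase}), setting $\Gamma_2=\bigcup_{\lambda\in[0,1]}\sigma_\lambda\circ\Gamma_1$. Your proposal is in fact more careful than the paper's short proof, which simply asserts that ``$\Phi(\lambda)^\star$ is a Fourier integral operator which depends smoothly on a parameter $\lambda\in[0,1]$'' and applies Theorem~\ref{Fouriervariablephase} without spelling out the uniformity in $\lambda$ or the verification that $\Gamma_2$ satisfies the soft landing condition---both points you correctly identify and address via compactness in $\lambda$ and Propositions~\ref{Idlift}, \ref{HeleinmicroscopicI}.
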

\begin{proof}
Let $\rho_1,\rho_2$ 
be two Euler vector fields and 
$t$ is microlocally weakly homogeneous 
of degree $s$ at $p\in I$ for $\rho_1$.
We use Proposition \ref{propositionvariablefamily} which states that locally 
there exists a smooth family of diffeomorphisms 
$\Phi(\lambda): V_p\mapsto V_p$ such that 
$\forall\lambda\in[0,1],\Phi(\lambda)(p)=p$ 
and 
$\Phi(\lambda)$ relates the two scalings: 
$$e^{\log\lambda\rho_2*}=\Phi(\lambda)^*e^{\log\lambda\rho_1*} .$$
Then $\Phi(\lambda)^\star$ 
is a Fourier integral operator
which depends
smoothly on 
a parameter 
$\lambda\in [0,1]$. 
$\lambda^{-s} e^{\log\lambda\rho_1*}t$ is bounded in $\mathcal{D}^\prime_{\Gamma_1}(V\setminus I)$, 
then we apply Theorem (\ref{Fouriervariablephase}) to deduce that the family
$$\Phi(\lambda)^*\left(\lambda^{-s}e^{\log\lambda\rho_1*}t\right)_\lambda=\left(\lambda^{-s}e^{\log\lambda\rho_2*}t\right)_\lambda $$
is in fact bounded in $\mathcal{D}^\prime_{\Gamma_2}(V_p)$, with $\Gamma_2$ given by the equation
$$ \Gamma_2= \bigcup_{\lambda\in[0,1]}\sigma_\lambda \circ \Gamma_1 $$
where $\sigma_\lambda=T^\star \Phi^{-1}(\lambda)$. 
\end{proof}
The previous theorem allows us 
to define a space of distributions $E_s(U)$ that are microlocally weakly homogeneous of degree $s$,  the definition being independent 
of the choice of Euler vector field $\rho$:
\begin{defi}\label{defEsmuloc} 
$t$ is microlocally weakly homogeneous of degree $s$ at $p$ if $t$ is microlocally weakly homogeneous of degree $s$ at $p$ for some $\rho$. $E^{\mu}_s(U)$ is the space of all distributions $t\in\mathcal{D}^\prime(U)$ such that
$\forall p\in (I\cap \overline{U})$, $t$ is microlocally weakly homogeneous of degree $s$ at $p$.
\end{defi}
We now state
a general theorem which summarizes all our investigations
in the first four chapters of this thesis and is a microlocal analog 
of Theorem \ref{thmfi},
\begin{thm}\label{thmfin2}
Let $U$ be an open neighborhood of $I\subset M$, if $t\in E^\mu_s(U\setminus I)$ then there exists
an extension $\overline{t}$ in $E^\mu_{s^\prime}(U)\cap \mathcal{D}^\prime_{WF(t)\cup C}\left(U\right)$ 
where $s^\prime=s$ if $-s-d\notin\mathbb{N}$ 
and
$s^\prime<s$ otherwise.
\end{thm}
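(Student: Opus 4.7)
The plan is to closely mimic the gluing argument used in Theorem \ref{thmfi}, but upgraded so that every local extension is provided by the microlocal extension theorem \ref{mainthm22} rather than by the plain one. By Theorem \ref{thminvmuloc} together with Definition \ref{defEsmuloc}, the property of being microlocally weakly homogeneous of degree $s$ at a point $p\in I$ does not depend on the choice of Euler vector field $\rho$, so it is enough to check the conclusion in any convenient local chart around each $p\in I$. First I would cover a neighborhood of $I$ by local charts $(x,h)_i:U_i\to \Omega_i\subset\mathbb{R}^{n+d}$ where $I=\{h=0\}$ and the canonical Euler is $\rho_i=h^j\partial_{h^j}$, with $U_i$ chosen $\rho_i$-convex. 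On each such chart, $t|_{U_i\setminus I}$ is microlocally weakly homogeneous of degree $s$ for $\rho_i$, i.e.\ $WF(t|_{U_i\setminus I})$ satisfies the soft landing condition and $(\lambda^{-s}(t|_{U_i\setminus I})_\lambda)_{\lambda\in(0,1]}$ is bounded in $\mathcal{D}^\prime_{\Gamma_i}(U_i\setminus I)$ for some $\Gamma_i$ satisfying the soft landing condition.

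Next I would apply Theorem \ref{mainthm22} inside each $\Omega_i$: this yields a local extension $\overline{t_i}\in \mathcal{D}^\prime(\Omega_i)$ with $WF(\overline{t_i})\subset \overline{WF(t|_{U_i\setminus I})}\cup C$ and, simultaneously, the quantitative microlocal bound $(\lambda^{-s^\prime}\overline{t_i}_\lambda)_{\lambda\in(0,1]}$ bounded in $\mathcal{D}^\prime_{\Gamma_i\cup C}(\Omega_i)$, with $s^\prime=s$ when $-s-d\notin\mathbb{N}$ and arbitrary $s^\prime<s$ otherwise. Pulling back to $U_i$ gives a local extension $\overline{t|_{U_i}}\in\mathcal{D}^\prime(U_i)$ lying in $E^\mu_{s^\prime}(U_i)\cap \mathcal{D}^\prime_{WF(t)\cup C}(U_i)$ by the invariance Theorem \ref{thminvmuloc}. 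Finally I would patch these local extensions by a partition of unity $(\varphi_i)_i$ subordinated to $(U_i)_i$, together with $(1-\chi)t$ away from $I$, and set $\overline{t}=\sum_i\varphi_i\overline{t|_{U_i}}$.

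It then remains to verify the three claims for $\overline{t}$. That $\overline{t}$ extends $t$ is immediate, since on $(U_i\cap U_j)\setminus I$ all local extensions coincide with $t$, so the partition-of-unity sum reconstructs $t$ off $I$; the various $\overline{t|_{U_i}}$ need not coincide on $I$, but this is irrelevant because the differences are supported on $I$ and get absorbed into the counterterms (exactly as in the proof of Theorem \ref{thmfi}). The bound $WF(\overline{t})\subset WF(t)\cup C$ follows from $WF(\varphi_i\overline{t|_{U_i}})\subset WF(\overline{t|_{U_i}})\subset \overline{WF(t)}\cup C$ combined with the elementary fact that the wave front set of a locally finite sum is contained in the union of the wave front sets. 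For the microlocal weak homogeneity of $\overline{t}$, I would check the condition pointwise on $I\cap \overline{U}$: around any $p\in U_i$, pick the Euler $\rho_i$ of the local chart; then near $p$ the distribution $\overline{t}$ differs from $\overline{t|_{U_i}}$ by a finite sum of terms $\varphi_j\overline{t|_{U_j}}$ with $j\neq i$ which are either zero near $p$ or already smooth multiples of $\overline{t|_{U_j}}$, and Theorem \ref{mainthm22} together with the product/pull-back stability results (Theorem \ref{productbounded} and Theorem \ref{Fouriervariablephase}, together with Proposition \ref{sumstable}) then guarantee that $(\lambda^{-s^\prime}\overline{t}_\lambda)_{\lambda\in(0,1]}$ remains bounded in $\mathcal{D}^\prime_{\Gamma\cup C}$ near $p$.

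The main obstacle I expect is this last point: showing that the gluing by a partition of unity preserves the boundedness of the scaled family in the microlocal topology $\mathcal{D}^\prime_{\Gamma\cup C}$. Multiplication by a smooth $\varphi_i$ is a good operation (continuous and bounded on $\mathcal{D}^\prime_\Gamma$ by Theorem \ref{producttheoremeskin}), but the scaling operator $e^{\log\lambda\rho\star}$ does not commute with multiplication by $\varphi_i$ unless $\varphi_i$ is scale-invariant, so one must absorb the commutator. The cleanest way around this, which I would follow, is to note that the commutator term is supported away from $p$ in the $h$-direction up to a fixed threshold uniformly in $\lambda\in(0,1]$ (because $\varphi_i\circ e^{\log\lambda\rho}$ stays in a fixed compact set of $U_i$ for $\lambda\in(0,1]$), so the scaled commutator lies in a smooth bounded family, hence contributes a bounded family in any $\mathcal{D}^\prime_{\Gamma\cup C}$ seminorm; invoking once more Theorem \ref{thminvmuloc} removes the dependence on the particular Euler used for the check, completing the verification of $\overline{t}\in E^\mu_{s^\prime}(U)$.
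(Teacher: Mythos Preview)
Your proposal is correct and follows essentially the same approach as the paper. The paper presents Theorem \ref{thmfin2} without a detailed proof, describing it merely as ``a microlocal analog of Theorem \ref{thmfi}'' that ``summarizes all our investigations in the first four chapters''; your argument is precisely the expected fleshing-out of that analogy --- cover by local charts, apply Theorem \ref{mainthm22} locally, and glue by a partition of unity as in the proof sketch of Theorem \ref{thmfi} --- and your discussion of the commutator between scaling and multiplication by $\varphi_i$ (handled via Theorem \ref{producttheoremeskin} and the boundedness of $(\varphi_i)_\lambda$ in $C^\infty$) supplies more detail than the paper itself.
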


\section{Appendix.}
%\begin{lemm}\label{theFourierversionofEs}
%$t\in E_s(U) \Leftrightarrow \forall \chi\in C_c^\infty(U),\exists m\in\mathbb{N}, C<\infty$ $$\sup_{\lambda\in[0,1]} \Vert (1+\vert k\vert+\lambda\vert\xi\vert)^{-m}\lambda^{-s-d} \widehat{t \chi_{\lambda^{-1}}}(k,\xi)\Vert_{L^\infty(\mathbb{R}^{n+d})} \leqslant C $$
%\end{lemm}
We recall a deep theorem
of Laurent Schwartz
(see \cite{Schwartz} p.~86 theorem (22))
which gives a concrete
representation
of bounded families
of distributions.
\begin{thm}\label{LaurentSchwartzbounded}
For a subset $B\subset \mathcal{D}^\prime(\mathbb{R}^d)$
to be bounded it is neccessary and sufficient 
that
for any domain $\Omega$ with compact closure,
there is an multiindex $\alpha$
such that
$\forall t\in B,\exists f_t\in C^0(\Omega)$ where
$t|_\Omega= \partial^\alpha f_t$
and $\sup_{t\in B}\Vert f_t\Vert_{L^\infty(\Omega)}<\infty$.
\end{thm}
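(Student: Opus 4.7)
The plan is to prove both implications via the Banach--Steinhaus theorem recalled in the appendix of Chapter~1, combined with an elementary integration argument trading derivatives for $L^1$ norms.

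The sufficient direction is immediate: if $t|_\Omega=\partial^\alpha f_t$ with $\sup_{t\in B}\Vert f_t\Vert_{L^\infty(\Omega)}\leqslant C$, then for any $\varphi\in\mathcal{D}(\Omega)$,
$$\vert\langle t,\varphi\rangle\vert=\left\vert\int_\Omega f_t(y)(-\partial)^\alpha\varphi(y)\,dy\right\vert\leqslant C\,\mathrm{vol}(\Omega)\,\pi_{\vert\alpha\vert}(\varphi),$$
and this bound, uniform in $t\in B$, shows that $B$ is weakly (hence strongly, by Banach--Steinhaus) bounded in $\mathcal{D}^\prime(\mathbb{R}^d)$.

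For the converse, I would fix a bounded open $\Omega$ with compact closure enclosed in a cube $[0,L]^d$, and first invoke Banach--Steinhaus applied to $B$ viewed as a family of continuous linear functionals on the Fr\'echet space $\mathcal{D}_{\overline{\Omega}}(\mathbb{R}^d)$: this produces an integer $k$ and a constant $C_\Omega$, both independent of $t\in B$, such that $\vert\langle t,\varphi\rangle\vert\leqslant C_\Omega\pi_k(\varphi)$ for every $\varphi\in\mathcal{D}(\Omega)$. The next step is to set $\alpha:=(k+1,\dots,k+1)$: for any multi-index $\beta$ with $\vert\beta\vert\leqslant k$ one has $\alpha_i-\beta_i\geqslant 1$ in each coordinate, so iterating the fundamental theorem of calculus in each direction (using that $\partial^\gamma\varphi$ vanishes on $\partial[0,L]^d$) yields the Taylor-type representation
$$\partial^\beta\varphi(x)=\int_0^{x_1}\!\!\cdots\int_0^{x_d}\prod_{i=1}^d\frac{(x_i-y_i)^{\alpha_i-\beta_i-1}}{(\alpha_i-\beta_i-1)!}\,\partial^\alpha\varphi(y)\,dy,$$
from which $\pi_k(\varphi)\leqslant C_{k,d,L}\Vert\partial^\alpha\varphi\Vert_{L^1(\Omega)}$ with a constant depending only on $k,d,L$. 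Composing, $\vert\langle t,\varphi\rangle\vert\leqslant C'_\Omega\Vert\partial^\alpha\varphi\Vert_{L^1(\Omega)}$ with $C'_\Omega$ still uniform in $t\in B$. Since $\varphi\mapsto\partial^\alpha\varphi$ is injective on $\mathcal{D}(\Omega)$, the prescription $L_t(\partial^\alpha\varphi):=\langle t,\varphi\rangle$ defines a continuous linear functional on the subspace $\partial^\alpha\mathcal{D}(\Omega)\subset L^1(\Omega)$ of norm $\leqslant C'_\Omega$; Hahn--Banach then extends $L_t$ to an element of $L^1(\Omega)^\star$, i.e.\ yields some $g_t\in L^\infty(\Omega)$ with $\Vert g_t\Vert_{L^\infty}\leqslant C'_\Omega$ and $t|_\Omega=(-1)^{\vert\alpha\vert}\partial^\alpha g_t$ in the distributional sense.

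To upgrade this $L^\infty$ representation to a continuous one, I would integrate once more in each variable: setting $f_t(x):=\int_0^{x_1}\!\!\cdots\int_0^{x_d}g_t(y_1,\dots,y_d)\,dy_1\cdots dy_d$ produces a function continuous on $\overline{\Omega}$ with $\Vert f_t\Vert_{L^\infty(\Omega)}\leqslant L^d C'_\Omega$ uniformly in $t\in B$, and satisfying $\partial^{(1,\dots,1)}f_t=g_t$. Taking $\alpha':=\alpha+(1,\dots,1)=(k+2,\dots,k+2)$ then yields $t|_\Omega=(-1)^{\vert\alpha\vert}\partial^{\alpha'}f_t$, which is the desired representation. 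The only point where a bit of care is needed is to ensure that the order $k$ and the constant $C_\Omega$ extracted from Banach--Steinhaus depend only on $\Omega$ and not on the individual $t\in B$; this is precisely the content of uniform boundedness, and every other step is standard.
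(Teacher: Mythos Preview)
Your proof is correct. The paper itself does not give a proof of this statement: it is stated twice (in the opening of Chapter~4 and again in its Appendix) purely as a citation, introduced as ``a deep theorem of Laurent Schwartz (see \cite{Schwartz} p.~86 Theorem (22))'', with no argument supplied. So there is nothing in the paper to compare your approach against; you have simply filled in a result that the author chose to quote.

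Your route---Banach--Steinhaus to obtain a uniform order $k$ and constant $C_\Omega$, the iterated Taylor remainder to convert the $\pi_k$ bound into an $L^1$ bound on $\partial^\alpha\varphi$, then Hahn--Banach on $L^1(\Omega)$ to produce $g_t\in L^\infty$, and one further integration to reach $C^0$---is one of the standard proofs of the structure theorem for (families of) distributions of finite order. The only places worth a remark are minor: you should choose the cube so that $\overline\Omega$ lies in its \emph{open} interior (so that all derivatives of $\varphi$ genuinely vanish at the corner you integrate from), and the sign $(-1)^{|\alpha|}$ can of course be absorbed into $f_t$. Neither affects the validity of the argument.
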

We give 
an equivalent formulation
of the theorem
of Laurent Schwartz
in terms of Fourier
transforms:
\begin{thm}\label{vouluChristian}
Let $B\subset \mathcal{D}^\prime(\mathbb{R}^d)$.
$$ \forall \chi\in \mathcal{D}(\mathbb{R}^d),\exists m\in\mathbb{N},\, \sup_{t\in B} \Vert (1+\vert\xi\vert)^{-m}\widehat{t\chi}\Vert_{L^\infty}<+\infty$$ 
$$\Leftrightarrow B \text{ weakly bounded in }\mathcal{D}^\prime(\mathbb{R}^d)\Leftrightarrow B \text{ strongly bounded in }\mathcal{D}^\prime(\mathbb{R}^d).$$
\end{thm}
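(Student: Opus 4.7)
The plan is to establish the three-way equivalence via the chain $(1) \Rightarrow (2) \Rightarrow (3) \Rightarrow (1)$, relying on the Banach--Steinhaus machinery already developed in the appendix of Chapter $1$ to bridge weak and strong boundedness, and on elementary Paley--Wiener arguments to connect these with the Fourier-side estimate.

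First I would dispose of $(2) \Leftrightarrow (3)$, which is the purely functional-analytic heart of the statement and is essentially a restatement of Theorem \ref{LaurentSchwartzbounded}. For any compact $K \subset \mathbb{R}^d$, the space $\mathcal{D}_K(\mathbb{R}^d)$ is Fr\'echet, so by the Banach--Steinhaus argument recalled in the appendix of Chapter $1$, weak boundedness of $B$ implies that for each such $K$ there exist $(p_K, C_K)$ with $|\langle t, \varphi\rangle| \leqslant C_K \pi_{p_K}(\varphi)$ for all $t \in B$ and all $\varphi \in \mathcal{D}_K(\mathbb{R}^d)$. This is precisely the quantitative statement of strong boundedness (the seminorm estimates are uniform in $t \in B$), and the reverse implication is tautological.

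Next, for $(3) \Rightarrow (1)$, I would fix $\chi \in \mathcal{D}(\mathbb{R}^d)$ with support in some compact $K$ and a compact neighborhood $K_2$ of $K$, and apply the second theorem of the appendix of Chapter $1$ to the family of distributions $\{t\chi \,|\, t \in B\}$, which is supported on $K$ and still weakly bounded. This yields an integer $p$ and a constant $C$ such that $|\langle t\chi, \varphi\rangle| \leqslant C \pi_{p,K_2}(\varphi)$ uniformly in $t \in B$ and $\varphi \in C^\infty(\mathbb{R}^d)$. Testing against the smooth function $\varphi_\xi(x) = e^{-ix\cdot\xi}$ gives $\widehat{t\chi}(\xi) = \langle t\chi, \varphi_\xi\rangle$, and the elementary bound $\pi_{p,K_2}(\varphi_\xi) \leqslant C'(1+|\xi|)^p$ immediately delivers $\sup_{t \in B} \|\theta^{-p}\widehat{t\chi}\|_{L^\infty} < +\infty$ with $m = p$.

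Finally, for $(1) \Rightarrow (2)$, given $\varphi \in \mathcal{D}(\mathbb{R}^d)$ I choose $\chi \in \mathcal{D}(\mathbb{R}^d)$ with $\chi \equiv 1$ on $\supp\varphi$, so that $\langle t, \varphi\rangle = \langle t\chi, \varphi\rangle$. The Parseval identity gives
\begin{equation*}
\langle t, \varphi\rangle = (2\pi)^{-d}\int_{\mathbb{R}^d} \widehat{t\chi}(\xi)\,\widehat{\varphi}(-\xi)\, d\xi,
\end{equation*}
and since $\varphi$ is a test function, $\widehat{\varphi}$ is a Schwartz function; in particular $(1+|\xi|)^{m+d+1}\widehat{\varphi}(-\xi) \in L^1(\mathbb{R}^d)$ with $m$ the integer furnished by hypothesis $(1)$ for this $\chi$. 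Splitting as $\widehat{t\chi}\cdot\widehat{\varphi}(-\cdot) = \bigl(\theta^{-m}\widehat{t\chi}\bigr)\cdot\bigl(\theta^m\widehat{\varphi}(-\cdot)\bigr)$ yields
\begin{equation*}
|\langle t, \varphi\rangle| \leqslant (2\pi)^{-d}\, \|\theta^{-m}\widehat{t\chi}\|_{L^\infty}\, \|\theta^m\widehat{\varphi}\|_{L^1},
\end{equation*}
and the right-hand side is uniformly bounded in $t \in B$ by $(1)$, establishing weak boundedness.

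The only genuinely delicate point in this scheme is $(2) \Rightarrow (3)$, which uses the Baire property of the Fr\'echet space $\mathcal{D}_K$ in an essential way; however, this step has already been carried out in detail in the appendix of Chapter $1$, so here it suffices to invoke those theorems. The remaining two implications are direct computations combining Paley--Wiener with the elementary polynomial bound $\pi_{p}(e^{-ix\cdot\xi}\chi) \lesssim (1+|\xi|)^p$.
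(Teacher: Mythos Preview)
Your proof is correct and follows essentially the same approach as the paper: the equivalence of weak and strong boundedness is delegated to the Banach--Steinhaus argument of the Chapter~1 appendix, the implication $(3)\Rightarrow(1)$ proceeds by testing $t\chi$ against the Fourier characters $e^{-ix\cdot\xi}$ and using the polynomial bound $\pi_p(\chi e_\xi)\lesssim(1+|\xi|)^p$, and the implication $(1)\Rightarrow(2)$ uses Parseval together with the rapid decay of $\widehat{\varphi}$. The only cosmetic difference is that the paper bounds the integral in $(1)\Rightarrow(2)$ by inserting the integrable weight $(1+|\xi|)^{-d-1}$ and controlling $(1+|\xi|)^{m+d+1}\widehat{\varphi}$ in $L^\infty$ (yielding a bound by $\pi_{m+d+1}(\varphi)$), whereas you phrase it as an $L^\infty$--$L^1$ pairing; these are the same estimate.
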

\begin{proof}
We will not recall here the proof
that $B$ is weakly bounded is
equivalent to
$B$ is strongly bounded 
(by Banach Steinhaus see the appendix of Chapter $1$).
Assume $\forall \chi\in \mathcal{D}^\prime(\mathbb{R}^d),\exists m\in\mathbb{N},\sup_{t\in B} \Vert (1+\vert\xi\vert)^{-m}\widehat{t\chi}\Vert_{L^\infty}<+\infty$.
We fix an arbitrary test function
$\varphi$. There is
a function $\chi\in\mathcal{D}(\mathbb{R}^d)$ 
such that $\chi=1$
on the support of $\varphi$.
Then 
$$\vert\left\langle t , \varphi \right\rangle\vert=\vert\left\langle t\chi , \varphi \right\rangle\vert=\vert\left\langle \widehat{t\chi} , \widehat{\varphi} \right\rangle\vert$$ $$=\vert\int_{\mathbb{R}^d} d^d\xi (1+\vert\xi\vert)^{-d-1}(1+\vert\xi\vert)^{-m}\widehat{t\chi}(\xi)(1+\vert\xi\vert)^{m+d+1}\widehat{\varphi}(\xi) \vert $$
$$\leqslant \underset{\text{integrable}}{\underbrace{\int_{\mathbb{R}^d} d^d\xi (1+\vert\xi\vert)^{-d-1}}}\vert(1+\vert\xi\vert)^{-m}\widehat{t\chi}(\xi)\vert \vert(1+\vert\xi\vert)^{m+d+1} \widehat{\varphi}(\xi) \vert$$
$$\leqslant C\Vert (1+\vert\xi\vert)^{-m}\widehat{t\chi}\Vert_{L^\infty} \pi_{m+d+1}(\varphi), $$
finally 
$$\sup_{t\in B} \vert\left\langle t , \varphi \right\rangle\vert \leqslant C\pi_{m+d+1}(\varphi)\sup_{t\in B}\Vert (1+\vert\xi\vert)^{-m}\widehat{t\chi}\Vert_{L^\infty}<+\infty .$$
Conversely, we can always assume $B$ to be strongly bounded, then
for all $\chi\in \mathcal{D}_K(\mathbb{R}^d)$,
the family $\left(\chi e_\xi\right)_{\xi\in\mathbb{R}^d}$ where $e_\xi(x)=e^{-ix.\xi}$ 
has fixed compact support $K$. 
Then there exists
$m$ and a universal constant $C$ such that
$$\forall t\in B, \forall\varphi \in \mathcal{D}(K), \vert\left\langle t , \varphi \right\rangle\vert\leqslant C\pi_m(\varphi)$$
thus 
$$\forall t\in B, \vert \widehat{t\chi}\vert(\xi)=\vert\left\langle t , \chi e_\xi  \right\rangle\vert\leqslant C\pi_m( \chi e_\xi ), $$
now notice that $\pi_m( \chi e_\xi)$ is polynomial in $\xi$ of degree $m$ thus $\sup_{\xi} \vert(1+\vert\xi\vert)^{-m}\pi_m( \chi e_\xi)$ is bounded.
But then $(1+\vert\xi\vert)^{-m}\vert \widehat{t\chi}(\xi)\vert\leqslant C\underset{\text{bounded in }\xi}{\underbrace{\vert(1+\vert\xi\vert)^{-m}\pi_m( \chi e_\xi )\vert}}$ and thus $\sup_{t\in B} \Vert \theta^{-m}\widehat{t\chi}\Vert_{L^\infty}<+\infty$.
\end{proof}

\chapter{The two point function 
$\left\langle 0|\phi(x)\phi(y)|0\right\rangle$.}
\paragraph{Introduction.}
Hadamard states are nowadays widely accepted as possible physical states of the free 
quantum field theory on a curved space-time. 
The Hadamard condition plays an essential role in the perturbative
construction of interacting quantum field theory \cite{BF}. 
Since the work of Radzikowski \cite{R}, the ``Hadamard condition'' (renamed microlocal
spectrum condition) is formulated as a requirement on 
the wave front set of the
associated two-point function $\Delta_+$ which is necessarily a bisolution of the
wave equation in the globally hyperbolic space-time.
The construction of solutions of the wave equation in a globally hyperbolic space-time
by the parametrix method, following Hadamard \cite{Hadamard} and Riesz \cite{Riesz}, 
is by now classical in the mathematical literature. 
For space-times of the form 
$\mathbb{R}\times M$
where
$M$ is a compact Riemannian manifold,
it is well known that 
$\Delta_+=\frac{e^{it\sqrt{-\Delta}}}{\sqrt{-\Delta}}$
where $e^{it\sqrt{-\Delta}}$ is a Fourier
integral operator constructed in \cite{DG} theorem $(1.1)$ p.~43 
with 
the wave front set satisfying the Hadamard condition 
(see also \cite{CDV} th\'eor\`eme 1 p.~2).
However, to our knowledge, only the
recent work of C. G\'erard and M. Wrochna 
\cite{GW} treats the non 
static space-times case 
(although
\cite{Junker} 
constructed
Hadamard states
on space-times 
with compact Cauchy surfaces). 
Furthermore, 
for the purpose 
of renormalizing 
interacting quantum field theory,
we need to establish 
that $\Delta_+$
has finite
``\textbf{microlocal scaling degree}''
(following the terminology of \cite{BF}),
which is a \textbf{stronger assumption} 
than establishing that 
$WF(\Delta_+)$ satisfies 
the Hadamard condition.
  
 The goal of this chapter is to prove that 
$\Gamma=WF(\Delta_+)$ satisfies the microlocal spectrum condition
and that 
$\Delta_+$ is
microlocally weakly homogeneous
of degree $-2$ in the sense of 
Chapter 
4 (means 
in the notation 
of Chapter 4
that
$\Delta_+\in E^\mu_{-2}$).
Although 
our goal is 
not to construct $\Delta_+$
on flat space, 
as preliminary, we spend some time 
to present various 
different mathematical interpretations of the Wightman function
$\Delta_+$ in the flat case and give many formulas 
that are scattered in the mathematical literature.
We provide 
proofs (or give precise references whenever we do not give all
the details) of so called 
``well known facts'', 
as
for instance the $\emph{Wick rotation}$, 
which cannot be easily found in the mathematical 
literature. 
In fact, 
our work done in the flat case 
will be useful 
when we pass to the curved case.

\paragraph{Our plan and some historical comments.}

 The first section deals with the Wightman 
function $\Delta_+$ in $\mathbb{R}^{n+1}$. 
We start with the expression of the Wightman function 
given by Reed and Simon \cite{RS}: $\Delta_+$ is the inverse Fourier transform $\mathcal{F}^{-1}\left(\mu\right)$ of a Lorentz invariant measure $\mu$ supported by the positive mass hyperboloid in momentum space.
This beautiful interpretation also appears in the book of Laurent Schwartz \cite{Schwartzpart}. This gives a first proof that $\Delta_+$ is a tempered distribution.
The formalism of functional calculus immediately allows us to relate $\mathcal{F}^{-1}\left(\mu\right)$ with the function $\frac{e^{it\sqrt{-\Delta}}}{\sqrt{-\Delta}}$ of the
Laplace operator $-\Delta$, 
$\frac{e^{it\sqrt{-\Delta}}}{\sqrt{-\Delta}}$ is a solution
in the space of
operators 
of the
wave equation.
From the inverse Fourier transform formula, 
$\Delta_+$ is interpreted as an oscillatory integral (\cite{RS}), hence 
by a theorem of H\"ormander, this gives
us a first possible way to compute the WF of $\Delta_+$.
 
 Then we give a second approach to the Wightman function: 
we notice the striking similarity of 
$\frac{e^{it\sqrt{-\Delta}}}{\sqrt{-\Delta}}$ with 
the Poisson kernel $\frac{e^{-\tau \sqrt{-\Delta}}}{\sqrt{-\Delta}}$, 
and the fact 
that they
should
be the 
same formula 
if we could treat the time variable $t$
as a complex variable. To carry out this program, 
we first compute the inverse Fourier transform 
w.r.t. to the variables $\xi$ of the Poisson kernel 
$\frac{e^{-\tau\vert\xi\vert}}{\vert\xi\vert}$, 
we obtain the function $\frac{C}{\tau^2+\sum_{i=1}^n (x^i)^2}$ 
which can be viewed as the Schwartz kernel of the operator $e^{-t\sqrt{-\Delta}}\sqrt{-\Delta}^{-1}$.  
This computation relies on the beautiful 
\textbf{subordination identity} connecting 
the Poisson operator and the Heat kernel.
Then we show how to make sense of the analytic continuation in time of the Poisson kernel $\frac{C}{\tau^2+\sum_{i=1}^n x_i^2}$, 
called the wave Poisson kernel and
which 
corresponds to the operator
$e^{i(t+i\tau)\sqrt{-\Delta}}\sqrt{-\Delta}^{-1} $. 
This allows to recover $\Delta_+$ when the complexified time $(\tau-it)$ becomes \textbf{purely imaginary}, justifying the famous \textbf{Wick rotation} 
and 
giving a third proof that $\Delta_+$ is a distribution.
In fact, to generalize this idea to static space-times of the form $\mathbb{R}\times M$ where $M$ 
is a noncompact Riemannian manifolds, we can use the machinery of functional calculus defined in
the monograph \cite{Taylor} (see also \cite{Taylor-Pinsky}), from the relation 
$$f(\sqrt{-\Delta}_g)=\frac{1}{(2\pi)^{\frac{1}{2}}}\int_{-\infty}^{+\infty} \widehat{f}(t) e^{it\sqrt{-\Delta}_g}  ,$$ 
one can easily define the analytic continuation in time of the Poisson kernel $e^{-\tau \sqrt{-\Delta}_g}\sqrt{-\Delta}_g^{-1}$, hence define the Wick rotation of $e^{-\tau \sqrt{-\Delta}_g}\sqrt{-\Delta}_g^{-1}$ where $\Delta_g$ denotes the Laplace--Beltrami operator on the noncompact Riemannian manifold, 
this will be the object of future investigations. 
Finally, we arrive at the formula 
which expresses the kernel 
of the Wightman function 
as a distribution defined as the boundary value of a holomorphic function
$$\frac{C}{Q(.+i0\theta)}=\lim_{\varepsilon\rightarrow 0^+} \frac{C}{(x^0\pm i\varepsilon)^2-\sum_{i=1}^n (x^i)^2} ,$$ 
where $Q(x)=(x^0)^2-\sum_{i=1}^n (x^i)^2$ and $\theta=(1,0,0,0)$.
Applying general theorems of H\"ormander, this gives a fourth proof of the fact that $\Delta_+$ is a distribution and a second way to estimate the wave front set of $\Delta_+$. 
Along the way, we prove that 
$\log\left((x^0+i0)^2-\sum_1^n (x^i)^2 \right)$ 
and the family $\left((x^0+i0)^2-\sum_{i=1}^n (x^i)^2\right)^s$ 
are distributions with wave front set 
satisfying the microlocal condition condition.
\paragraph{Going to the curved case.}
There are two conceptual difficulties
when we pass to the curved case, the first
is to intrinsically define objects
on $M^2$ which generalize 
the singularity 
$Q^{-1}(\cdot + i0\theta)$ of $\Delta_+$
and the powers of $Q$ in general. 
The starting
point 
is to pull back 
distributions and functions
defined on $\mathbb{R}^{n+1}$ by a map $F:\mathcal{V}\subset M^2\mapsto \mathbb{R}^{n+1}$ constructed by
inverting
the exponential 
geodesic map.
This well-known technique was already used in \cite{Hadamard} and \cite{Riesz} 
and is expounded 
in many recent works (\cite{Bar-Ginoux-Pfaffle}, \cite{Waldmann}), 
however none of these works present a computation of the wave front set
of the pulled back singular term $F^\star Q^{-1}(\cdot + i0\theta)$. 
Here we prove
that the wave front set of the singular term 
$F^\star Q^{-1}(\cdot + i0\theta)$ 
satisfies the Hadamard condition as stated in \cite{R}.
 
 The second step consists
in pulling back
certain distributions 
in 
$\mathcal{D}^\prime(M)$ 
on $\mathbb{R}^{n+1}$ in order  
to set 
and solve
the system
of transport equations. 
For all $p\in M$, 
we define a map 
$E_p:\mathbb{R}^{n+1}\mapsto M$
which allows to pull-back functions, 
differential operators and the metric
on $\mathbb{R}^{n+1}$ 
($\mathbb{R}^{n+1}$ 
is identified with the exponential
chart centered at $p$).

 Once these two difficulties
are solved, and all proper
geometric objects are defined, 
it is simple to follow 
the classical construction
of Hadamard \cite{Hadamard} 
to obtain a 
parametrix with 
suitable
wave front set.
\section{The flat case.} 
Fix the Lorentz invariant quadratic form $Q(x^0,x^1,\dots,x^n)=(x^{0})^{2}-\sum_1^n (x^{i})^2$ in $\mathbb{R}^{n+1}$.
In the book of Laurent Schwartz \cite{Schwartzpart}, the study of particles is related to the problem of finding Lorentz invariant tempered distributions of positive type on $\mathbb{R}^{n+1}$. By Fourier transform and application of the Bochner theorem (p.~60,66 in \cite{Schwartzpart}), 
it is equivalent to the problem of finding positive Lorentz invariant measures $\mu\in \left(C^0(\mathbb{R}^{n+1})\right)^\prime$ in momentum space.
Then $\mu$ is called a scalar particle.
If the particle is \emph{elementary}, 
it is required that $\mu$ is extremal which means that 
$\mu=\sum \alpha_i\mu_i$ holds iff 
$\mu_i$ are proportional to $\mu$. 
This notion of extremal measure is 
the analogue in functional analysis 
of the notion of irreducible 
representations of a group in representation theory.
We also require that $\mu$ has \emph{positive energy} 
i.e. $\mu$ is supported on $\{x^0\geqslant 0\}$.
Before we discuss Lorentz invariant measures, we would like to give a simple formula which is a reinterpretation of the usual Lebesgue integration in $\mathbb{R}^{n+1}$ in terms of $\emph{slicing}$ by the orbits of the Lorentz group:
\begin{equation}
\int_{\mathbb{R}^{n+1}} f \wedge_{\mu=0}^n dx^\mu = \int_{-\infty}^\infty dm \int_{Q=m} f\frac{\wedge_{\mu=0}^n dx^\mu}{dQ} 
\end{equation} 
as a consequence of the
coarea formula 
of Gelfand Leray (\cite{Gelfand}, \cite{Zoladek}). 
Notice that we can produce natural Lorentz invariant measures by modifying this integral, instead of integrating over the Lebesgue measure $dm$ over the real line, we integrate against an arbitrary measure $\rho(m)$:
\begin{prop} 
Any Lorentz invariant measure of positive energy $\mu$ can be represented by the formula
\begin{equation}
\mu(f)= \int_{-\infty}^\infty \rho(m) \int_{Q=m} f\frac{\wedge_{\nu=0}^n dx^\nu}{dQ}+cf(0) 
\end{equation}  
where the measure $\rho$ is in fact the push-forward of $\mu$:
$$\rho=Q_*(\mu) .$$
In particular, by Bochner theorem, any tempered positive distribution $\mu$ invariant by $O(n,1)_+^{\uparrow}$ can be represented by
\begin{equation}
\mu(f)=\int_{-\infty}^\infty \rho(m) \int_{Q=m} \widehat{f}\frac{\wedge_{\nu=0}^n dx^\nu}{dQ} + c\int_{\mathbb{R}^{n+1}} d^{n+1}x f(x).
\end{equation}  
\end{prop}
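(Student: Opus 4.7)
The plan is to reduce the classification of Lorentz-invariant positive measures of positive energy on $\mathbb{R}^{n+1}$ to a disintegration along the orbits of $O(n,1)_+^{\uparrow}$, combined with uniqueness of invariant measures on each orbit. The Gelfand--Leray coarea formula recalled just above the proposition is exactly the tool for this: it already gives the canonical Lorentz-invariant measure $\mu_m = \wedge_{\nu=0}^n dx^\nu / dQ$ on each smooth level set $\{Q=m\}$.

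First I would describe the orbit stratification. Under $O(n,1)_+^{\uparrow}$, the space $\mathbb{R}^{n+1}\setminus\{0\}$ decomposes into: (i) the future timelike hyperboloids $H_m^+=\{Q=m, x^0>0\}$ for $m>0$, (ii) their past counterparts, (iii) the spacelike hyperboloids $\{Q=m\}$ for $m<0$, (iv) the future/past light cones $C^\pm=\{Q=0,\pm x^0>0\}$; and the origin $\{0\}$ is a fixed point. The positive-energy hypothesis $\supp\mu\subset\{x^0\geqslant 0\}$ kills the past cones and past hyperboloids, so only the future orbits plus the origin survive. The key classical fact I would invoke is that $O(n,1)_+^{\uparrow}$ acts transitively on each of $H_m^+$, $C^+\setminus\{0\}$, and $\{Q=m\}$ for $m<0$, with isotropy group $O(n)$, the Euclidean subgroup $\mathrm{ISO}(n-1)$, and $O(n-1,1)_+^{\uparrow}$ respectively; each isotropy group being unimodular, there is a \emph{unique} (up to a positive constant) invariant Radon measure on each orbit, which is nothing but $\mu_m$.

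Next I would perform the disintegration. Define $\rho:=Q_*\mu$, which is a positive Radon measure on $\mathbb{R}$ since $Q$ is proper on $\supp\mu\setminus\{0\}$ in an appropriate sense (one must check tempered growth; this follows from $\mu$ being tempered and the coarea identity). On the open set $U=\mathbb{R}^{n+1}\setminus Q^{-1}(0)$, the map $Q$ is a submersion, so the disintegration theorem produces a family of positive measures $\nu_m$ on $\{Q=m\}$ with $\mu|_U=\int \nu_m\,d\rho(m)$. Lorentz-invariance of $\mu$ and uniqueness of $\rho=Q_*\mu$ force each $\nu_m$ to be Lorentz-invariant on the orbit, hence proportional to $\mu_m$; absorbing the proportionality constant into $\rho$ yields $\mu|_U=\int \rho(m)\,\mu_m\,dm$ in the sense stated. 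Then I would treat the remaining piece $\mu|_{Q^{-1}(0)}$: on the punctured future cone $C^+\setminus\{0\}$ the same argument gives a multiple of the canonical cone measure, which is already captured by the $m=0$ part of $\int\rho(m)\mu_m\,dm$ (written via the Gelfand--Leray form extended to the cone); the residual $\mu$-mass concentrated at the single-point orbit $\{0\}$ is necessarily of the form $c\,\delta_0$ with $c\geqslant 0$, producing the extra term $c f(0)$.

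The second statement is then an immediate application of Bochner--Schwartz: a tempered positive distribution $t$ invariant under $O(n,1)_+^{\uparrow}$ has Fourier transform $\widehat t$ which is a positive tempered measure, and $\widehat t$ inherits Lorentz invariance since $Q$ is preserved by the orthogonal transpose action; applying the first part to $\mu=\widehat t$ and unraveling $\langle t,f\rangle = \langle \widehat t,\widehat f\rangle$ gives the stated representation.

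The main obstacle is the rigorous disintegration together with the handling of the singular orbit structure at $m=0$: the map $Q$ fails to be a submersion on the light cone, so one cannot directly apply the smooth coarea formula there, and one has to combine the disintegration on $U$ with a separate Lorentz-invariance argument on $C^+\setminus\{0\}$ and at the origin. The other delicate point, purely measure-theoretic, is to justify that $\rho=Q_*\mu$ is well-defined as a tempered measure and that the disintegration fibers can be identified with the canonical Gelfand--Leray forms $\wedge_\nu dx^\nu / dQ$; this is where the temperedness and positivity hypotheses on $\mu$ are genuinely used.
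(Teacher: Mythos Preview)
Your proposal is correct and follows the standard route (orbit stratification, uniqueness of invariant measures on each homogeneous orbit, disintegration along $Q$, treatment of the fixed point $\{0\}$, then Bochner--Schwartz for the second part). The paper itself does not give an argument at all: its proof consists solely of a reference to Reed--Simon \cite{RS} Theorem 9.33 and to M\'eth\'ee's classification of Lorentz-invariant distributions, so your outline is in fact considerably more detailed than what appears in the text. The technical caveats you flag (the failure of $Q$ to be a submersion on the light cone, and the need to check that $\rho=Q_*\mu$ is tempered) are exactly the points where the cited references do the real work.
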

\begin{proof} 
The proof is given in full detail in \cite{RS} Theorem 9.33 p.~75
and also the classification of all Lorentz invariant distributions was given by M\'eth\'ee.
\end{proof}
From now on, we assume $\mu$ has positive energy. 
Inspired by the previous proposition, we claim
\begin{prop} 
Any extremal measure of positive energy $\mu$ in $\mathbb{R}^{n+1}$ which is invariant by the group
$O(n,1)_+^{\uparrow}$ 
of time and orientation preserving Lorentz transformations is supported on one orbit of $O(n,1)_+^{\uparrow}$. 
\end{prop}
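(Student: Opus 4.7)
The plan is to use the structural decomposition from the preceding proposition together with a disintegration argument along the Lorentz orbits. By the previous proposition, any positive energy Lorentz invariant tempered measure $\mu$ admits a unique representation
\begin{equation*}
\mu(f) = \int_{[0,\infty)} \rho(dm) \int_{Q=m,\, x^0\geqslant 0} f \, \frac{\wedge_{\nu=0}^n dx^\nu}{dQ} + c f(0),
\end{equation*}
with $\rho$ a positive Radon measure on $[0,\infty)$ and $c\geqslant 0$, the restriction to $[0,\infty)$ coming from the positive energy assumption (spacelike orbits $\{Q<0\}$ meet both half-spaces $\{x^0>0\}$ and $\{x^0<0\}$ and are therefore excluded). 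Denote by $\mu_m$ the Lorentz invariant measure on the orbit $\{Q=m,\, x^0\geqslant 0\}$ appearing inside the integral, so that $\mu = \int \rho(dm)\,\mu_m + c\delta_0$.

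The first step is to rule out that both $c>0$ and $\rho\neq 0$. If both were positive, then $\mu = c\delta_0 + \mu'$ with $\mu' = \int \rho(dm)\mu_m$ both nonzero positive and Lorentz invariant; but $\supp \delta_0 = \{0\}$ while $\supp \mu' \subset \{Q\geqslant 0,\, x^0\geqslant 0\}\setminus\{0\}$ (since $\mu_m$ gives no mass to $0$ for any $m$ in its support away from the singular fiber, and one may arrange this by removing an arbitrarily small ball), hence $\delta_0$ and $\mu'$ cannot be proportional to $\mu$, contradicting extremality. Thus either $\mu = c\delta_0$, which is supported on the single fixed orbit $\{0\}$, or $c=0$ and $\mu = \int \rho(dm)\mu_m$.

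In the remaining case, it suffices to prove that $\rho$ is concentrated at a single point $m_0\in[0,\infty)$. Suppose otherwise; then one can find a Borel set $A\subset [0,\infty)$ with $0<\rho(A)<\rho([0,\infty))$, for instance $A = [0,m_0]$ for a suitable $m_0$ in the interior of $\supp \rho$. Set $\mu_1 = \int_A \rho(dm)\mu_m$ and $\mu_2 = \int_{A^c}\rho(dm)\mu_m$. Since the function $Q$ is Lorentz invariant, the level sets $\{Q\in A\}$ and $\{Q\in A^c\}$ are Lorentz invariant, so $\mu_1$ and $\mu_2$ are positive, nonzero, Lorentz invariant measures with $\mu = \mu_1+\mu_2$. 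Their supports in the variable $Q$ are disjoint (up to $\rho$-null sets), so neither can be proportional to $\mu$, again contradicting extremality. Therefore $\rho = \alpha\delta_{m_0}$ and $\mu = \alpha \mu_{m_0}$ is supported on the single orbit $\{Q=m_0,\, x^0\geqslant 0\}$.

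The principal subtlety, and the step requiring the most care, is the construction of the splitting $\rho = \rho|_A + \rho|_{A^c}$ in such a way that the resulting pieces $\mu_1,\mu_2$ are genuinely tempered distributions of positive type (not merely positive measures), so that extremality may legitimately be invoked; this relies on the uniqueness of the disintegration in the previous proposition and the fact that truncating $\rho$ by a bounded Borel set preserves the temperedness of the associated measure.
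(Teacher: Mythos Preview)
Your proof is correct and follows essentially the same strategy as the paper: push the measure forward along the orbit space (via $Q$), then split the resulting measure on the base to produce a nontrivial decomposition contradicting extremality. The paper uses a smooth cutoff $\chi$ with $\chi(m_1)=1$, $\chi(m_2)=0$ to split, while you use characteristic functions of Borel sets; these are interchangeable here. One minor point in your favor: you treat the atom $c\,\delta_0$ separately from the continuous part $\int\rho(dm)\mu_m$, which is necessary since the origin and the null half-cone both lie in $\{Q=0\}$ and are not separated by the push-forward $Q_*\mu$ alone---the paper's sketch glosses over this.
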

\begin{proof} 
It was proved in a very general setting in \cite{Schwartzpart} p.~72.
The orbits of $O(n,1)_+^{\uparrow}$ in the positive energy region $\{x^0\geqslant 0\}$ are connected components of constant mass hyperboloids for $m>0$,
the half null cone $ (x^0)^2-\vert x\vert^2=0, x^0>0 $
and the fixed point $\{0\}$ of the group action:
$$\bigcup_{m>0} \underset{\text{positive mass hyperboloids}}{\{ (x^0)^2-\vert x\vert^2=m^2, x^0>0\}}\bigcup \underset{\text{halfcone}}{\{(x^0)^2-\vert x\vert^2=0, x^0>0\}}\bigcup \underset{\text{origin}}{\{0\}} .$$
Let $\mu$ be an $O(n,1)_+^{\uparrow}$ invariant measure on $\mathbb{R}^{n+1}$. Let $Q$ be 
the canonical 
$O(n,1)$ invariant quadratic form. 
Then the push-forward $Q_*\mu$ is a well defined measure on $\mathbb{R}^+$ (since $\mu$ has positive energy) because $Q$ is smooth and the support of $Q_*\mu$ contains the masses of the particles.
Assume the support of $\mu$ contains two points which are in disjoint orbits of $O(n,1)_+^{\uparrow}$, then the push-forward $Q_*\mu$ is supported at two different points $m_1,m_2$. Then pick a smooth function $0\leqslant\chi\leqslant 1$ such that $\chi(m_1)=1$ and $\chi(m_2)=0$ and consider the pair of push pull measures 
$$ Q^*\left(\chi Q_*\mu\right), Q^*\left((1-\chi)Q_*\mu\right) .$$ 
These are measures with different supports, hence linearly independent, and
$$\mu= H(x^0)Q^*\left(\chi Q_*\mu\right) + H(x^0)Q^*\left((1-\chi)Q_*\mu\right) $$ 
which contradicts the extremality of $\mu$.    
\end{proof}
%
%
%
% 
%
% We recall here a definition relating the Wightman functions $W_2$ and $\Delta_+$ taken from Reed Simon vol $2$, theorem $9.33$, theorem $9.34$ page $70$.
% 
% This definition defines $\Delta_+$ as the distribution representing the linear form $W_2$.
% 
%\begin{defi}
%There exists a polynomially bounded measure $\rho$ on $[0,\infty)$ such that
%$$W_2(f)=\int_0^\infty d\rho(m) \left(\int_{H_m}\hat{f}d\Omega_m\right)=\int_0^\infty d\rho(m)\frac{1}{i}\int dx f(x)\Delta_{2,+}(x;m)$$ $$=\int_0^\infty d\rho(m) \frac{1}{i}\left\langle f,\Delta_{2,+}(m)\right\rangle ,$$ which means the linear map $W_2$ is represented as the composition of three operations: first the Fourier transform of $f$ then the restriction of $\hat{f}$ on the mass hyperboloid of $\textbf{positive energy}$, finally the averaging of the restriction by a Lorentz invariant measure. 
%\end{defi}
%
Now, let $\mu$ be an extremal measure of positive energy. We already saw the support of $\mu$ is one orbit of $O(n,1)_+^{\uparrow}$, a hyperboloid of mass $m>0$.
Here we give an interpretation of the $O(n,1)^\uparrow_+$ invariant measure $\mu$ supported by the mass shell $m$ of positive energy (which is unique by theorem $9.37$ in \cite{RS}) in terms of the Gelfand--Leray distributions (see \cite{Gelfand}).
We introduce the
following notations:
$$\xi=(\xi^\mu)_{0\leqslant \mu\leqslant n}=(\xi^0,\xi^i_{1\leqslant i\leqslant n})
=(\xi^0,\overrightarrow{\xi}).$$
\begin{prop}
Let $\Omega=d\xi^0\wedge d^n\overrightarrow{\xi}$ be the canonical measure in $\mathbb{R}^{n+1}$ 
and $Q= (\xi^{0})^2-\sum_{i=1}^n (\xi^{i})^2$.
Then we can construct an $O(n,1)^\uparrow_+$ invariant measure $\mu$ supported by the component of positive energy of $Q=m$ given by the formulas:
\begin{equation}
\mu(f)=\left\langle\delta_{m},\left(\int_{Q=m}f\frac{\Omega}{dQ}\right)\right\rangle=\int_{\mathbb{R}^{n}} \frac{d^n\overrightarrow{\xi}}{2\sqrt{m^2+|\overrightarrow{\xi}|^2}}  f((m^2+\vert\overrightarrow{\xi}\vert^2)^{\frac{1}{2}},\overrightarrow{\xi}).
\end{equation}
\end{prop}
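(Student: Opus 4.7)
The plan is to carry out the proof in two stages: first establish the explicit formula for the Gelfand--Leray form $\Omega/dQ$ on the positive mass hyperboloid, and then observe that Lorentz invariance of $\mu$ follows automatically from the invariance of both $\Omega$ and $Q$.

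First, I would parametrize the positive energy sheet $\{Q=m,\ \xi^0>0\}$ by the spatial momentum $\overrightarrow{\xi}\in\mathbb{R}^n$, via the embedding $\overrightarrow{\xi}\mapsto(\sqrt{m+|\overrightarrow{\xi}|^2},\overrightarrow{\xi})$ (interpreting the level $Q=m$ in the sense of the paper, i.e.\ $m$ plays the role of the squared mass). The Gelfand--Leray form $\Omega/dQ$ on this level set is the unique $n$-form $\omega$ such that $dQ\wedge\omega=\Omega$. Since $dQ=2\xi^0\,d\xi^0-2\sum_i \xi^i\,d\xi^i$, a direct computation using that $d\xi^0$ is the only factor producing a nonzero wedge with $d^n\overrightarrow{\xi}$ gives $\Omega/dQ=d^n\overrightarrow{\xi}/(2\xi^0)$. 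Substituting $\xi^0=\sqrt{m+|\overrightarrow{\xi}|^2}$ and pairing with $\delta_m$ in the variable $Q$ then yields the second equality of the proposition.

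Next I would verify Lorentz invariance. The top form $\Omega=d\xi^0\wedge d^n\overrightarrow{\xi}$ is invariant under $O(n,1)^\uparrow_+$ because elements of this group have determinant $+1$, and $Q$ is invariant by definition, hence $dQ$ is invariant as well. Therefore the quotient $\Omega/dQ$, viewed as the Leray residue on the smooth hypersurface $\{Q=m\}$, is intrinsically defined and Lorentz invariant, and so is the resulting measure. Positive energy is preserved by $O(n,1)^\uparrow_+$ by the very definition of the orthochronous subgroup, so $\mu$ is indeed supported on the positive-energy sheet.

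The only mildly delicate point is making sense of the bracket $\left\langle\delta_m,\int_{Q=m}f\,\Omega/dQ\right\rangle$ as an actual pairing; this is handled by the coarea/Gelfand--Leray formula already recalled in the section, which expresses $\int_{\mathbb{R}^{n+1}}f\,\Omega=\int dm\int_{Q=m}f\,\Omega/dQ$ on the region where $dQ\neq 0$ (which is everywhere except the origin, harmless since $m>0$). Given this, no genuine obstacle remains: both equalities are elementary calculations, and the invariance is a one-line consequence. I would simply be careful about the factor of $2$ arising from $dQ=2\xi^0\,d\xi^0+\cdots$, which is what produces the standard Lorentz invariant measure $d^n\overrightarrow{\xi}/(2\sqrt{m+|\overrightarrow{\xi}|^2})$ familiar from quantum field theory.
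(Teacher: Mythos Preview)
Your proposal is correct and follows essentially the same approach as the paper: both compute the Gelfand--Leray form $\Omega/dQ$ in the parametrization by $\overrightarrow{\xi}$, obtaining $d^n\overrightarrow{\xi}/(2\xi^0)$ from $dQ\wedge(d^n\overrightarrow{\xi}/2\xi^0)=\Omega$, and both deduce Lorentz invariance from the invariance of $\Omega$ (determinant one) and of $Q$. Your remark on the $m$ versus $m^2$ discrepancy is apt; the paper itself is not entirely consistent on this point.
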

\begin{proof}
 Let us remark that the Lebesgue measure in momentum space $\Omega=d\xi^0\wedge d^n\overrightarrow{\xi}$ is $O(n,1)$ invariant because the determinant of any element in $O(n,1)$ equals $1$.
Let us compute the $\delta$ function $\delta_{\{(\xi^0)^2-|\overrightarrow{\xi}|^2=m,\xi^0\geqslant 0\}}(\Omega)$ as defined in Gelfand--Shilov \cite{Gelfand} :
$$\delta_{\{(\xi^0)^2-|\overrightarrow{\xi}|^2=m,\xi^0\geqslant 0\}}(d\xi^0\wedge d^{n}\overrightarrow{\xi})= \int_{\{ \xi^0=\sqrt{m^2+|\overrightarrow{\xi}|^2}\} }\frac{d\xi^0\wedge d^{n}\overrightarrow{\xi}}{d((\xi^0)^2-(m^2+|\overrightarrow{\xi}|^2))} $$
The Gelfand-Leray form 
$\frac{d\xi^0\wedge d^{n}\overrightarrow{\xi}}{d((\xi^0)^2-(m^2+|\overrightarrow{\xi}|^2))}$ 
is the ratio of two Lorentz invariant forms. More explicitely, we compute this ratio in the parametrization 
$\overrightarrow{\xi}\in\mathbb{R}^n\mapsto 
((m^2+\vert\overrightarrow{\xi}\vert^2)^{\frac{1}{2}},\overrightarrow{\xi})\in\mathbb{R}^{n+1}$
of the mass hyperboloid:
 $$\frac{d\xi^0 \wedge d^n\overrightarrow{\xi}}{d((\xi^0)^2-(m^2+|\overrightarrow{\xi}|^2))}|_{\xi^0=\sqrt{m^2+|\overrightarrow{\xi}|^2}}=\frac{d\xi^0 \wedge d^n\overrightarrow{\xi}}{2(\xi^0 d\xi^0-\left\langle\overrightarrow{\xi}, d\overrightarrow{\xi}\right\rangle)}|_{\xi^0=\sqrt{m^2+|\overrightarrow{\xi}|^2}}$$ 
$$=\frac{d^n\overrightarrow{\xi}}{2\xi^0}|_{\xi^0=\sqrt{m^2+|\overrightarrow{\xi}|^2}}$$ 
because $\frac{d^n\overrightarrow{\xi}}{2\xi^0}\wedge 2(\xi^0 d\xi^0-\left\langle\overrightarrow{\xi}, d\overrightarrow{\xi}\right\rangle)=d\xi^0 \wedge d^n\overrightarrow{\xi}$
$$=\frac{d^n\overrightarrow{\xi}}{2\sqrt{m^2+|\overrightarrow{\xi}|^2}},$$ 
we thus connect with the formula found in \cite{RS} p.~70,74.
\end{proof}
Once we have this measure $\mu$ in momentum space, 
we would like to recover the distribution
it defines by computing the inverse Fourier
transform $\mathcal{F}^{-1}(\mu)$
in $\mathbb{R}^{n+1}$. 
\begin{prop}
Assume $\Delta_+=\mathcal{F}^{-1}\left(\mu\right)$ where 
$\mu$ is an extremal measure of mass $m$, 
$O(n,1)_+^{\uparrow}$ invariant and of positive energy, 
then $\Delta_+$ is given by the formula
\begin{equation}\label{delta_+formula}
\Delta_+(x;m)=\frac{1}{2(2\pi)^{n+1}}\int_{\mathbb{R}^n} \frac{e^{-ix^0(m^2+|\overrightarrow{\xi}|^2)^{\frac{1}{2}} +i\overrightarrow{x}.\overrightarrow{\xi}}}{(m^2+|\overrightarrow{\xi}|^2)^{\frac{1}{2}}} d^n\overrightarrow{\xi}.
\end{equation}
\end{prop}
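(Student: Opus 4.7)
The plan is to unwind the definition of the inverse Fourier transform of a tempered measure and apply Fubini. By the preceding proposition, the measure $\mu$ is carried by the positive mass shell and, via the parametrization $\vec\xi\in\mathbb{R}^n\mapsto ((m^2+|\vec\xi|^2)^{1/2},\vec\xi)$, acts on test functions $\psi\in\mathcal{S}(\mathbb{R}^{n+1})$ as
\begin{equation*}
\langle\mu,\psi\rangle=\int_{\mathbb{R}^n}\psi\bigl((m^2+|\vec\xi|^2)^{1/2},\vec\xi\bigr)\,\frac{d^n\vec\xi}{2\sqrt{m^2+|\vec\xi|^2}}.
\end{equation*}
The first step is to observe that, since the density $1/(2\sqrt{m^2+|\vec\xi|^2})$ is uniformly bounded on $\mathbb{R}^n$, $\mu$ is a tempered measure on $\mathbb{R}^{n+1}$, so $\Delta_+=\mathcal{F}^{-1}\mu$ is a well-defined tempered distribution.

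Next I would apply the duality definition $\langle\mathcal{F}^{-1}\mu,\varphi\rangle=\langle\mu,\mathcal{F}^{-1}\varphi\rangle$ for $\varphi\in\mathcal{S}(\mathbb{R}^{n+1})$ and substitute the integral representation of $\mathcal{F}^{-1}\varphi$. This yields an iterated integral
\begin{equation*}
\langle\Delta_+,\varphi\rangle=\frac{1}{(2\pi)^{n+1}}\int_{\mathbb{R}^n}\frac{d^n\vec\xi}{2\sqrt{m^2+|\vec\xi|^2}}\int_{\mathbb{R}^{n+1}}e^{ix^0(m^2+|\vec\xi|^2)^{1/2}-i\vec x\cdot\vec\xi}\varphi(x)\,dx,
\end{equation*}
(up to the sign conventions chosen for $\mathcal{F}^{-1}$). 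The integrand is dominated by $|\varphi(x)|/(2\sqrt{m^2+|\vec\xi|^2})$, which is absolutely integrable on $\mathbb{R}^{n+1}\times\mathbb{R}^n$ because $\varphi\in\mathcal{S}$ provides fast decay in $x$ (uniformly in $\vec\xi$) while $1/\sqrt{m^2+|\vec\xi|^2}$ is locally integrable and bounded. Fubini's theorem then permits the interchange of the two integrations.

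After interchanging, the right-hand side is of the form $\int_{\mathbb{R}^{n+1}}\varphi(x)\,\Delta_+(x;m)\,dx$ with $\Delta_+(x;m)$ given by the announced formula (\ref{delta_+formula}), and the identification is complete. The only mild subtlety, which I would emphasize but not dwell on, is that the inner integral in (\ref{delta_+formula}) does not converge absolutely for fixed $x$: the formula must therefore be read either as an oscillatory integral or, equivalently, as the Schwartz kernel of the tempered distribution $\Delta_+$ whose pairing against Schwartz functions is justified by the Fubini argument above. There is no substantial obstacle in this proof — it is essentially a bookkeeping computation fixing Fourier-transform conventions — the mild care required in interpreting the resulting oscillatory integral being the only point that deserves explicit mention.
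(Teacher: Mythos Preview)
Your overall approach---unwind the duality pairing $\langle\mathcal{F}^{-1}\mu,\varphi\rangle=\langle\mu,\mathcal{F}^{-1}\varphi\rangle$ and read off the kernel---is the same as the paper's, which does the computation via the Gelfand--Leray identity but with the same content. However, your Fubini justification contains an actual error.

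You claim the integrand is dominated by $|\varphi(x)|/(2\sqrt{m^2+|\vec\xi|^2})$ and that this is absolutely integrable on $\mathbb{R}^{n+1}\times\mathbb{R}^n$. The bound is correct, but the integrability is false: integrating out $x$ leaves a constant times $\int_{\mathbb{R}^n}d^n\vec\xi/\sqrt{m^2+|\vec\xi|^2}$, which diverges for every $n\geqslant 1$ (in radial coordinates this is $\int^\infty r^{n-2}\,dr$). So Fubini, as you invoke it, does not apply, and the ``interchange'' step is not justified.

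The repair is simple and in fact you already have the ingredients. Do not interchange at all. The iterated integral in the order you wrote it down (outer $\vec\xi$, inner $x$) is precisely $\langle\mu,\mathcal{F}^{-1}\varphi\rangle$: the inner integral is $(\mathcal{F}^{-1}\varphi)\bigl((m^2+|\vec\xi|^2)^{1/2},\vec\xi\bigr)$, which lies in $\mathcal{S}(\mathbb{R}^{n+1})$ and hence decays rapidly along the mass shell, so the outer $\vec\xi$-integral converges against the tempered measure. That iterated integral \emph{is} the definition of the pairing, and formula (\ref{delta_+formula}) is then simply its oscillatory-integral shorthand---exactly as you say in your closing remark. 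So drop the Fubini sentence and promote your final caveat to the actual argument.
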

\begin{proof}
To prove the claim, we use the Gelfand--Leray notation and the beautiful identity $e^{i\tau f}\omega=e^{i\tau t} dt\int_{t=f} \frac{\omega}{df} $ (\cite{Zoladek} page $124$ lemma $(5.12)$), which allows to rewrite the Reed Simon formula:
$$\delta_{\{(\xi^0)^2-|\overrightarrow{\xi}|^2=m,\xi^0\geqslant 0\}}(e^{i(x^0\xi^0+\overrightarrow{x}.\overrightarrow{\xi})}\Omega) $$
$$=\int e^{i(x^0\xi^0+\overrightarrow{x}.\overrightarrow{\xi})} d\xi^0 \int_{\xi^0=\sqrt{m^2+|\overrightarrow{\xi}|^2}} \frac{d\xi^0 \wedge d^n\overrightarrow{\xi}}{d(\xi^{02}-(m^2+|\overrightarrow{\xi}|^2))}$$ 
$$=\int_{\mathbb{R}^n} e^{i(x^0\sqrt{m^2+|\overrightarrow{\xi}|^2} +\overrightarrow{x}.\overrightarrow{\xi})}\frac{d^n\overrightarrow{\xi}}{2\sqrt{m^2+|\overrightarrow{\xi}|^2}},$$
we recognize the inverse Fourier transform of a distribution supported 
by the positive
sheet of the 
hyperboloid.
\end{proof}
If we provisionnally 
call $t$ the variable
$x^0$
then the above proposition allows to 
interpret $\Delta_+$ 
as the Schwartz kernel
of the operator $\frac{e^{it\sqrt{-\Delta + m^2}}}{\sqrt{-\Delta + m^2}}$
where $\Delta$ is the Laplace operator acting on $\mathbb{R}^n$.
Also notice that the evolution operator $t\mapsto U(t)=\frac{e^{it\sqrt{-\Delta+m^2}}}{\sqrt{-\Delta+m^2}}$ 
satisfies the square root 
Klein--Gordon equation:
$\partial_t-i\sqrt{-\Delta+m^2} U=0 ,$  
thus $\Delta_+(x;m)$ is a solution of the Klein Gordon equation and for any $u\in H^s(\mathbb{R}^n)$, $u_+=\Delta_+(t;m)*u$ is a solution of the Klein Gordon equation 
which has \textbf{positive energy} i.e. its 
Fourier transform is supported in the positive hyperboloid.

\subsection{The Poisson kernel, the Wick rotation and the subordination identity.}
To define $\Delta_+$ as the inverse Fourier transform of 
the measure $\mu$ is not very satisfactory
since it does not give  
an explicit formula for $\Delta_+$
in space variables.
We will prove that $\Delta_+=C((x^0+i0)^2-\vert x\vert^2)^{-1}$ where we explain how to make sense 
of the term on the right hand side as a tempered distribution by the process of 
\textbf{Wick rotation}.
\begin{lemm}
The family of Schwartz distributions 
\begin{equation}\label{fundformula2}
\frac{1}{(2\pi)^n}\int_{\mathbb{R}^n} \frac{e^{ix.\xi-y\vert \xi\vert}}{|\xi|}d^n\xi
%=\mathcal{F}_\xi^{-1}\left( \frac{1}{(2\pi)^{\frac{n}{2}}} \frac{e^{-\vert\xi \vert y}}{\vert\xi\vert} \right)(x)
=\frac{e^{-y\sqrt{-\Delta}}}{\sqrt{-\Delta}}\delta(x)=\frac{\pi^{\frac{n+1}{2}}}{\Gamma(\frac{n-1}{2})}  \frac{1}{(y^2+\vert x\vert^2)^{\frac{n-1}{2}}}
\end{equation}
is holomorphic in $y\in \{y | Re(y)> 0 \}$ and \textbf{continuous} in $y\in\{y | Re(y)\geqslant 0 \}$ with values in $S^\prime(\mathbb{R}^n)$.
\end{lemm}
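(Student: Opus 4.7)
The plan is to establish the three claimed equalities and the stated regularity in $y$ by a sequence of elementary steps, with the principal difficulty lying in the continuity up to the imaginary axis.

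First I would verify that for $\mathrm{Re}(y)>0$ the integrand $|\xi|^{-1}e^{ix\cdot\xi-y|\xi|}$ defines an $L^1_{\mathrm{loc}}$ function in $\xi$ with exponential decay at infinity, so the Fourier integral converges absolutely outside a neighborhood of the origin and defines a smooth function of $x$ that is tempered; near $\xi=0$ the singularity $|\xi|^{-1}$ is integrable for $n\geqslant 2$. Holomorphicity in $y$ on the open right half-plane then follows from differentiating under the integral sign, since each $y$-derivative produces an extra factor of $-|\xi|$ that is absorbed by the exponential decay. By functional calculus for the positive self-adjoint operator $\sqrt{-\Delta}$ on $\mathbb{R}^n$, this same integral is exactly the Schwartz kernel of $e^{-y\sqrt{-\Delta}}\sqrt{-\Delta}^{-1}$ applied to $\delta_0$, which gives the middle equality.

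Next I would derive the closed form via the subordination identity
\begin{equation*}
e^{-y\sqrt{A}}=\frac{y}{2\sqrt{\pi}}\int_0^\infty \frac{e^{-y^2/(4s)}}{s^{3/2}}\,e^{-sA}\,ds,\qquad \mathrm{Re}(y)>0,
\end{equation*}
with $A=-\Delta$ and its divided version for $\sqrt{A}^{-1}$. Applied to $\delta_0$ and using the Gaussian heat kernel $e^{s\Delta}\delta_0(x)=(4\pi s)^{-n/2}e^{-|x|^2/(4s)}$, the $s$-integral reduces to a Gamma integral producing exactly $\pi^{(n+1)/2}\Gamma(\tfrac{n-1}{2})^{-1}(y^2+|x|^2)^{-(n-1)/2}$. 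This establishes the third equality on $\mathrm{Re}(y)>0$, and the closed form makes the holomorphy in $y$ manifest on the open half-plane.

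Finally, and this is the main obstacle, I would prove continuity of the map $y\mapsto (y^2+|x|^2)^{-(n-1)/2}\in\mathcal{S}'(\mathbb{R}^n)$ as $\mathrm{Re}(y)\downarrow 0$. The function $(y^2+|x|^2)^{-(n-1)/2}$ is the boundary value of a function holomorphic on $\{\mathrm{Re}(y)>0\}$, and for $y=i\tau$ with $\tau\in\mathbb{R}$ the expression $(i\tau+\varepsilon)^2+|x|^2=|x|^2-\tau^2+2i\varepsilon\tau-\varepsilon^2$ has nonvanishing imaginary part away from $\tau=0$, so one obtains a boundary-value distribution of the type $(|x|^2-\tau^2+i0\,\mathrm{sgn}(\tau))^{-(n-1)/2}$. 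The strategy is to test against a fixed Schwartz function $\varphi$, split the $x$-integral into a neighborhood of the light cone $|x|^2=\tau^2$ and its complement, and on the complement use dominated convergence, while on the neighborhood use a contour-shift of the defining Fourier integral to exhibit uniform $L^1_{\mathrm{loc}}(d\tau)$-type bounds. Equivalently, one can work directly with the Fourier representation: for $y=\varepsilon+i\tau$ the integrand is dominated by $|\xi|^{-1}e^{-\varepsilon|\xi|}$, and one checks via integration by parts in $\xi$ that the oscillatory integral with phase $x\cdot\xi-\tau|\xi|$ converges as $\varepsilon\downarrow 0$ in $\mathcal{S}'(\mathbb{R}^n)$, uniformly on compacts of the $\tau$-axis. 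Either approach yields the required continuity at $\mathrm{Re}(y)=0$ and identifies the boundary value with the distributional kernel needed for the Wick rotation in the next step.
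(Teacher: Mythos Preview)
Your derivation of the closed form via the subordination identity and the heat kernel is exactly the route the paper takes, and your treatment of holomorphicity on the open half-plane is fine (the paper does not spell that out at all inside this lemma).

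Where you diverge from the paper is in the continuity at $\mathrm{Re}(y)=0$, which you flag as ``the main obstacle'' and attack on the $x$-side via boundary-value and oscillatory-integral arguments. The paper's approach (which in fact appears in the proof of the very next theorem rather than inside this lemma) is much more elementary: stay on the Fourier side. For $n\geqslant 2$ one has the uniform pointwise bound
\[
\left|\,|\xi|^{-1}e^{-y|\xi|}\,\right|\leqslant |\xi|^{-1}\in L^1_{loc}(\mathbb{R}^n)
\]
for every $y$ with $\mathrm{Re}(y)\geqslant 0$, and $|\xi|^{-1}$ is itself tempered. Hence for any Schwartz test function $\varphi$, the pairing $\int |\xi|^{-1}e^{-y|\xi|}\widehat{\varphi}(\xi)\,d\xi$ is controlled by the fixed integrable majorant $|\xi|^{-1}|\widehat{\varphi}(\xi)|$, and dominated convergence gives continuity of $y\mapsto |\xi|^{-1}e^{-y|\xi|}$ in $\mathcal{S}'(\mathbb{R}^n)$ on the closed half-plane. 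The inverse Fourier transform is continuous on $\mathcal{S}'$, so the continuity transfers to the $x$-side automatically. You actually gesture at this in your ``equivalently'' sentence, but then add integration by parts in $\xi$ and uniformity estimates that are unnecessary: the single $L^1_{loc}$ bound does all the work.
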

Similar computations of Poisson integrals are presented in \cite{Stein} p.~60, 130, \cite{Eskin} and \cite{Taylor} (3.5).

\begin{proof}
Our proof follows \cite{Taylor} (3.5).
Everything relies on the following  identity (see the identity $\beta$ in \cite{Stein} p.~61)
\begin{equation}\label{suborviet}
\frac{e^{-Ay}}{A}=\frac{1}{\pi^{1/2}}\int_0^\infty e^{\frac{-y^2}{4t}}e^{-A^2t}t^{-\frac{1}{2}}dt
\end{equation}
which is derived from the \textbf{subordination identity} $(5.22)$ in \cite{Taylor}
\begin{equation}
e^{-Ay}=\frac{y}{2\pi^{1/2}}\int_0^\infty e^{\frac{-y^2}{4t}}e^{-A^2t}t^{-\frac{3}{2}}dt
\end{equation}
by integrating w.r.t. $y$ and by noticing that when $y=0$ our formula 
(\ref{suborviet}) coincides with the Hadamard--Fock--Schwinger formula:
$$\int_0^\infty t^{-\frac{1}{2}} e^{-tA^2} dt=\int_0^\infty t^{\frac{1}{2}} e^{-tA^2} \frac{dt}{t}$$
$$=A^{-1}\int_0^\infty t^{\frac{1}{2}} e^{-t} \frac{dt}{t}= A^{-1} \Gamma(\frac{1}{2})=A^{-1}\pi^{\frac{1}{2}} $$
since $\Gamma(\frac{1}{2})=\pi^{\frac{1}{2}}$.
%which is related to the beautiful Hadamard-Fock-Schwinger formula
%\begin{equation}
%A^{-s-1}\Gamma(s+1)=\int_0^\infty t^s e^{-tA}dt
%\end{equation}
%and 
%\begin{equation}
%\int_{-\infty}^\infty (\lambda^2+|\xi|^2)^{-1}e^{ix\xi}d\xi=\frac{\pi}{\lambda} e^{-\lambda x}
%\end{equation}
%There is a problem here. We need instead to consider $\int_{\R^3} (\lambda^2+|\xi|^2)^{-1}e^{ix\xi}d\xi$.
Next we need 
functional calculus in our proof since we 
want to apply 
the subordination 
identity
with 
$A=\sqrt{-\Delta}$.
We then get 
an identity for 
functions of the 
\textbf{operator} 
$\sqrt{-\Delta}$. 
We apply these operators to the delta function supported at $0$:
$$\frac{e^{-\sqrt{-\Delta}y}}{\sqrt{-\Delta}}\delta_0=\left(\frac{1}{\pi^{1/2}}\int_0^\infty e^{\frac{-y^2}{4t}}e^{t\Delta}t^{-\frac{1}{2}}dt\right)\delta_0,$$
we recognize on the left hand side a distributional solution of the Poisson operator $\partial_y^2+\Delta_x$ and on the right hand side, we recognize the Heat kernel $e^{t\Delta}\delta_0=\frac{1}{(4\pi t)^{\frac{n}{2}}}e^{-\frac{\vert x\vert^2}{4t}} .$
Substituting 
in the previous formula,
$$\frac{e^{-\sqrt{-\Delta}y}}{\sqrt{-\Delta}}\delta_0=\frac{1}{\pi^{1/2}}\int_0^\infty e^{\frac{-y^2}{4t}}\frac{1}{(4\pi t)^{\frac{n}{2}}}e^{-\frac{\vert x\vert^2}{4t}}t^{-\frac{1}{2}}dt = \frac{1}{(4\pi)^{\frac{n}{2}}\pi^{\frac{1}{2}}} \int_0^\infty dt e^{-\frac{y^2+|x|^2}{4t}} \frac{1}{t^{\frac{n+1}{2}}} $$
set $t=\frac{1}{4s}$ and we get $$\frac{1}{(4\pi)^{\frac{n}{2}}\pi^{\frac{1}{2}}} \int_0^\infty \frac{ds}{4s^2} e^{-(y^2+|x|^2)s} (4s)^{\frac{n+1}{2}}=\frac{1}{2 \pi^{\frac{n+1}{2}}}\int_0^\infty dse^{-(y^2+|x|^2)s} s^{\frac{n-3}{2}}$$
finally by a variable change in the formula of the Gamma function $$\frac{e^{-\sqrt{-\Delta}y}}{\sqrt{-\Delta}}\delta_0=\frac{\Gamma(\frac{n-1}{2})}{2 \pi^{\frac{n+1}{2}}}\frac{1}{(y^2+|x|^2)^{\frac{n-1}{2}}} $$
\end{proof}
We give an interpretation
of 
$((t\pm i0)^2-\vert x\vert^2)^{-\frac{n-1}{2}}$
as an oscillatory 
integral.
\begin{thm}
The limit $\lim_{\varepsilon\rightarrow 0}((t\pm i\varepsilon)^2-\vert x\vert^2)^{-\frac{n-1}{2}}$ makes sense in $S^\prime(\mathbb{R}^n)$ and satisfies the identity: 
\begin{equation}\label{Taylormagicidentity}
((t\pm i0)^2-\vert x\vert^2)^{-\frac{n-1}{2}}=\frac{(-1)^{\frac{n-1}{2}}\pi^{\frac{n+1}{2}}}{\Gamma(\frac{n-1}{2})(4\pi)^{\frac{n-1}{2}}}\int_{\mathbb{R}^n} d^n\xi \frac{1}{|\xi|}e^{\pm it|\xi|}e^{ ix.\xi}
\end{equation}
\end{thm}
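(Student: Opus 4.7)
The plan is to deduce the theorem from the preceding lemma by performing a Wick rotation of the parameter $y$ from the positive real axis to the imaginary axis. The lemma establishes that
\[
y \longmapsto \frac{1}{(2\pi)^n}\int_{\mathbb{R}^n}\frac{e^{ix\cdot\xi - y|\xi|}}{|\xi|}d^n\xi = \frac{\pi^{(n+1)/2}}{\Gamma((n-1)/2)}(y^2+|x|^2)^{-(n-1)/2}
\]
is holomorphic on $\{\mathrm{Re}(y) > 0\}$ and extends continuously to the closed half-plane $\{\mathrm{Re}(y) \geq 0\}$ with values in $\mathcal{S}'(\mathbb{R}^n)$; this continuity is the main analytic input and does all the heavy lifting.

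First, I would restrict the identity to the path $y = \varepsilon \mp it$ (where $(t,x)$ is now regarded as a pair of variables on $\mathbb{R}^{n+1}$ and $\varepsilon > 0$), which lies entirely in the open right half-plane. The left-hand side becomes the absolutely convergent integral $\int_{\mathbb{R}^n}|\xi|^{-1}e^{-\varepsilon|\xi|}e^{ix\cdot\xi \pm it|\xi|}d^n\xi$. By the continuity provided by the lemma, as $\varepsilon \to 0^+$ this integral converges in $\mathcal{S}'(\mathbb{R}^{n+1})$ to the oscillatory integral on the right-hand side of the theorem, which simultaneously shows that the latter defines a tempered distribution and identifies it with the limit of the regularized integral.

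Second, I would track the boundary value of the right-hand side. The computation $y^2 + |x|^2 = (|x|^2 - t^2 + \varepsilon^2) \mp 2i\varepsilon t$ shows that as $\varepsilon \to 0^+$ one has $y^2 + |x|^2 = -((t\pm i0)^2 - |x|^2)$ in the sense of boundary values of holomorphic functions, the imaginary part of $y^2 + |x|^2$ being \emph{opposite in sign} to that of $(t\pm i0)^2 - |x|^2$. Applying $z \mapsto z^{-(n-1)/2}$ along the principal branch anchored at $y > 0$ large and using $\arg(-z) = \arg(z) \mp \pi$ then produces a phase factor $e^{\mp i\pi(n-1)/2}$, which reduces to $(-1)^{(n-1)/2}$ when $(n-1)/2$ is an integer (the case relevant for the Hadamard parametrix of the wave equation). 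Rearranging the constants from the lemma using the elementary identity $(4\pi)^{(n-1)/2} = 2^{n-1}\pi^{(n-1)/2}$ produces the stated prefactor.

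The main obstacle is the careful tracking of the holomorphic branch of $z^{-(n-1)/2}$ along the rotation path $y = \varepsilon \mp it$. One must check that this branch, fixed to be real and positive for $y^2 + |x|^2 > 0$ large, stays on the principal sheet of $\mathbb{C}\setminus\mathbb{R}_{\leq 0}$ throughout the rotation, and that its limit coincides with the $(\pm i0)$ boundary value of $((t \pm i0)^2 - |x|^2)^{-(n-1)/2}$. This requires particular care when $|x|^2 - t^2$ changes sign, i.e.\ on the light cone, where $y^2+|x|^2$ passes close to $0$; since the zero set is avoided for $\varepsilon > 0$ by the non-vanishing imaginary part $\mp 2\varepsilon t$ (away from $t = 0$) and by the positive real part $\varepsilon^2 + |x|^2$ (near $t = 0$), the branch is unambiguously determined and the identification $(-1)^{(n-1)/2}$ for the phase is forced.
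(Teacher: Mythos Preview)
Your proposal is correct and follows essentially the same approach as the paper: invoke the preceding lemma on the Poisson kernel and Wick-rotate the parameter $y$ to $\varepsilon \mp it$, then let $\varepsilon \to 0^+$ using the continuity in $\mathcal{S}'$ established there. You are in fact more careful than the paper about tracking the branch of $z^{-(n-1)/2}$ and extracting the phase $(-1)^{(n-1)/2}$; the paper simply rewrites $(\tau\pm it)^2+|x|^2=|x|^2-(t\mp i\tau)^2$ and passes to the limit without further comment on the determination.
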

\begin{proof}
The key argument of the proof is to justify the analytic continuation of the Poisson kernel, this is called Wick rotation in physics textbooks. 
% We start with
%the fundamental solution of 
%$$\frac{d^2f}{dy^2}(x,y) + \Delta_x f(x,y)=0,\{x\in\R^n, y > 0 \}$$
%which in partial Fourier transform is given by formula
%$e^{-t|\xi|} $, by the inverse Fourier tranform
%$$P(y,x)=\mathcal{F}^{-1}\left(\frac{1}{(2\pi)^{n/2}}e^{-y|\xi|}\right)=\frac{1}{(2\pi)^n}\int_{\R^n}d^n\xi e^{-y|\xi|}e^{ix.\xi}  =e^{-t\Delta}\delta_{x=0} ,$$
%
%
% this fundamental solution can be interpreted as a function of the Laplace operator acting on the $\delta$ function. This fundamental solution is analytic in $t$ for $t>0$ and hence can be analytically continued in the complex domain $Re(t)>0$. Finally $\lim_{Re(y)\rightarrow 0^+}P(x,y)=\mathcal{F}^{-1}(\lim e^{-y|\xi|})=\mathcal{F}^{-1}(1)$ exists in the sense of Schwartz distributions by \textbf{continuity of the Fourier transform on } $S^\prime(\mathbb{R}^n)$ as the inverse Fourier transform of the constant function $1$ which is a well defined tempered distribution.
% 
Notice that $\frac{e^{-y\sqrt{-\Delta}}}{\sqrt{-\Delta}}\delta_0$ is the Schwartz kernel of a well defined operator $\frac{e^{-\sqrt{-\Delta}y}}{\sqrt{-\Delta}}$. 
Through the partial Fourier transform  w.r.t. 
the variable $x$, the operator 
$\frac{e^{-y\sqrt{-\Delta}}}{\sqrt{-\Delta}}$  
corresponds to
the multiplication 
by $\frac{e^{-y\vert\xi\vert}}{\vert\xi\vert}$.  
Consider now the function $\frac{1}{|\xi|}e^{-y|\xi|} $, \textbf{when $n\geqslant 2$}
this function is analytic in $\{y,Re(y)>0\}$ with value 
Schwartz distribution in $\xi$ because 
$$\forall y\in\{Re(y)\geqslant 0\}, \vert \frac{1}{|\xi|}e^{-y|\xi|}\vert \leqslant \frac{1}{|\xi|}\in L_{loc}^1(\mathbb{R}^n).$$ 
Notice
that 
the above estimate is still true
when $Re(y)\rightarrow 0^+$
hence $\frac{1}{|\xi|}e^{-y|\xi|} $ is a well defined Schwartz distribution in $\xi$ for $Re(y)\geqslant 0$ (it is continuous in $y$ with value tempered distribution).
Finally, we can continue this operator in the $y$ variable in the domain $Re(y)\geqslant 0$ , set $y=\tau+it$ and let $\tau$ tends to zero in $\mathbb{R}^+$. Set $\frac{e^{-\sqrt{-\Delta}(\tau\pm it)}}{\sqrt{-\Delta}}\delta_0=\frac{\Gamma(\frac{n-1}{2})}{2 \pi^{\frac{n+1}{2}}}\frac{1}{((\tau\pm it)^2+|x|^2)^{\frac{n-1}{2}}} $ then at the limit we find
$$\begin{array}{cc}
\frac{1}{(2\pi)^n}\int_{\mathbb{R}^n} \frac{e^{ix.\xi-(\tau\pm it)\vert \xi\vert}}{|\xi|}d^n\xi & = \frac{\Gamma(\frac{n-1}{2})}{2 \pi^{\frac{n+1}{2}}}\frac{1}{(-(t\mp i\tau)^2+\vert x\vert^2)^{\frac{n-1}{2}}} \\ 
\downarrow_{\tau\rightarrow 0^+} & \downarrow_{\tau\rightarrow 0^+} \\
\frac{1}{(2\pi)^n}\int_{\mathbb{R}^n} \frac{e^{ix.\xi\mp it\vert \xi\vert}}{|\xi|}d^n\xi  &=\frac{\Gamma(\frac{n-1}{2})}{2 \pi^{\frac{n+1}{2}}}\frac{1}{(\vert x\vert^2-(t\mp i0)^2)^{\frac{n-1}{2}}}\end{array}$$
\end{proof}

\subsection{Oscillatory integral.}
For QFT, we are interested in the formula (\ref{Taylormagicidentity}) for $n=3$. 
\begin{equation}
((t\pm i0)^2-\vert x\vert^2)^{-1}=C_n\int_{\mathbb{R}^n} d^n\overrightarrow{\xi} \frac{1}{|\overrightarrow{\xi}|}e^{\pm it|\overrightarrow{\xi}|}e^{-i\overrightarrow{x}.\overrightarrow{\xi}}, C_n=\frac{(-1)^{\frac{n-1}{2}}\pi^{\frac{n+1}{2}}}{\Gamma(\frac{n-1}{2})(4\pi)^{\frac{n-1}{2}}}.
\end{equation}
It provides a 
$\textbf{definition}$ of 
$((t\pm i0)^2-\vert\overrightarrow{\xi}\vert^2)^{-1}$ 
as an oscillatory integral or 
Lagrangian distribution in $\mathbb{R}^{n+1}$,  
\begin{equation}\label{oscilldef}
C_n\int_{\mathbb{R}^n} d^n\overrightarrow{\xi} e^{i\phi_{\pm}(t,\overrightarrow{x};\overrightarrow{\xi}) }\frac{1}{|\overrightarrow{\xi}|} 
\end{equation}
with phase function $\phi_\pm(t,\overrightarrow{x};\overrightarrow{\xi})
=\sum_{i=1}^n -x^i\xi_i \pm t\sqrt{\sum_1^n (\xi_i)^2} = -\overrightarrow{x}.\overrightarrow{\xi} \pm t\vert\overrightarrow{\xi}\vert$. 
The idea is to use the interpretation of $((t\pm i0)^2-\vert x\vert^2)^{-1}$ as an oscillatory integral
to compute $WF((t\pm i0)^2-\vert x\vert^2)^{-1}$.  
\begin{prop}\label{oscilldefi}
%Let us consider the identity $\frac{1}{2\pi}\int_{\mathbb{R}^3} d^3\theta e^{i(-x.\theta \pm t\vert\theta\vert) }\frac{1}{|\theta|}=((t\pm i0)^2-\vert\overrightarrow{\xi}\vert^2)^{-1}  $
%in the \textbf{sense of distributions}.
We claim  $$WF \left(C_n\int_{\mathbb{R}^n} d^n\xi e^{i(-x.\xi \pm t\vert\xi\vert) }\frac{1}{|\xi|}\right)=
\{ (0,0;\pm |\xi|,-\overrightarrow{\xi}) \}\cup  \{(\vert x\vert,x_i;\pm \lambda,-\frac{\lambda x_i}{\vert x\vert} ) \vert  \lambda > 0, \vert x\vert\neq 0 \}.$$
\end{prop}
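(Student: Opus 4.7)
The plan is to apply H\"ormander's theorem on the wave front set of oscillatory integrals (see~\cite{Hormander} Theorem 8.1.9) to the representation
\[
I_\pm(t,x) = C_n\int_{\mathbb{R}^n}e^{i\phi_\pm(t,x;\xi)}\frac{d^n\xi}{|\xi|},\qquad \phi_\pm(t,x;\xi) = -x\cdot\xi \pm t|\xi|.
\]
First I would split the amplitude $|\xi|^{-1} = \chi(\xi)|\xi|^{-1} + (1-\chi(\xi))|\xi|^{-1}$ with $\chi\in\mathcal{D}(\mathbb{R}^n)$ a cutoff equal to one near the origin. The compactly supported piece contributes a $C^\infty$ function on $\mathbb{R}^{n+1}$ (since $|\xi|^{-1}$ is locally integrable and one may differentiate under the integral sign), so it does not affect the wave front set. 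The remaining amplitude $(1-\chi(\xi))|\xi|^{-1}$ is a classical symbol of order $-1$, the phase $\phi_\pm$ is smooth, homogeneous of degree one, and non-degenerate, so H\"ormander's theorem yields the inclusion
\[
WF(I_\pm)\subset \Lambda_{\phi_\pm}:=\{(t,x;d_t\phi_\pm,d_x\phi_\pm) : d_\xi\phi_\pm(t,x;\xi)=0,\ \xi\neq 0\}.
\]

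Next I would compute $\Lambda_{\phi_\pm}$ explicitly. The critical equation $d_{\xi_j}\phi_\pm = -x_j \pm t\,\xi_j/|\xi| = 0$ rewrites as $x=\pm t\hat\xi$ with $\hat\xi=\xi/|\xi|$, forcing $|x|=|t|$; the fibre covector at such a point is $(d_t\phi_\pm,d_x\phi_\pm)=(\pm|\xi|,-\xi)$. Setting $\xi=\lambda\hat\xi$ with $\lambda>0$, the degenerate case $t=0$ forces $x=0$ and leaves $\hat\xi\in S^{n-1}$ free, producing the fibre $\{(\pm|\xi|,-\xi):\xi\neq 0\}$ over the origin; in the regular case $t\neq 0$ the direction $\hat\xi$ is determined by $x$ and $t$, and the covector becomes $(\pm\lambda,-\lambda x/|x|)$ with $\lambda>0$. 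These recover the two pieces of the set on the right-hand side of the proposition.

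For the reverse inclusion I would use the closed-form identity \eqref{Taylormagicidentity}, which identifies $I_\pm$ up to a non-vanishing constant with the boundary value $((t\pm i0)^2-|x|^2)^{-(n-1)/2}$ of a function holomorphic in the tube $\pm\,\mathrm{Im}(t)>0$. H\"ormander's edge-of-the-wedge theorem (\cite{Hormander} Theorem 8.1.6) then constrains $WF(I_\pm)$ to lie in the half-space $\pm\tau>0$, fixing the sign of $\tau$. The singular support is exactly $\{t^2=|x|^2\}\cup\{0\}$; away from the origin, the partial-fraction decomposition $((t\pm i0)^2-|x|^2)^{-1}=(2|x|)^{-1}\bigl((t\pm i0-|x|)^{-1}-(t\pm i0+|x|)^{-1}\bigr)$ (for $n=3$) together with the standard $WF(1/(s\pm i0))=\{s=0,\ \pm\sigma>0\}$ shows that every candidate covector in $\Lambda_{\phi_\pm}$ over a point of the light cone is actually attained, giving equality off the origin.

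The main obstacle is the vertex $(0,0)$, where $\Lambda_{\phi_\pm}$ degenerates into a full half null cone rather than a smooth conic submanifold and the clean-phase argument does not directly apply. To handle this point I would invoke the invariance of $I_\pm$ under the restricted Lorentz group $O(n,1)^\uparrow_+$: by the pull-back theorem (\cite{Hormander} Theorem 8.2.4) the set $WF(I_\pm)$ is stable under the cotangent lift of any such Lorentz transformation, and $I_\pm$ is visibly not smooth at the origin, so its fibre over $(0,0)$ is a non-empty closed conic set invariant under the lifted action. Combined with the sign constraint $\pm\tau>0$ from the edge-of-the-wedge theorem, this forces the fibre to be exactly the half null cone $\{(\tau,k) : \tau^2=|k|^2,\ \pm\tau>0\}$ claimed in the proposition.
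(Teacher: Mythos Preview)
Your proposal is correct and, in fact, considerably more thorough than what the paper does. The paper's proof consists of checking that $\phi_\pm$ is a phase function, computing the critical set $\Sigma_\phi=\{d_\xi\phi_\pm=0\}$ and the associated Lagrangian $\lambda_\phi\Sigma_\phi$, and then invoking Reed--Simon (Theorem~9.47 and Example~7, p.~101) to identify $WF$ with $\lambda_\phi\Sigma_\phi$. It does not treat the singularity of the amplitude at $\xi=0$, nor does it give a separate argument for the reverse inclusion; both are absorbed into the reference.

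Your route differs in two ways. First, you explicitly split off the locally integrable piece $\chi(\xi)|\xi|^{-1}$ to reduce to a genuine symbol, which is a point the paper glosses over. Second, for the equality you do not rely on ellipticity of the amplitude but instead exploit the closed form \eqref{Taylormagicidentity}: the edge-of-the-wedge theorem fixes the sign of $\tau$, the partial-fraction decomposition on the smooth part of the light cone gives equality there, and Lorentz invariance combined with transitivity on the forward null cone handles the vertex. This is a genuinely different argument for the lower bound, trading an appeal to the general elliptic-symbol theorem for concrete structural properties of this particular distribution. Either route is valid; yours has the advantage of being self-contained and of making explicit why the vertex fibre is the full half null cone.
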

\begin{proof}
This computation can be found 
in \cite{RS} example 7 p.~101.
The function 
$\phi=t\vert\overrightarrow{\xi} \vert - x.\overrightarrow{\xi}$.
satisfies 
the axioms of 
a phase function 
because it is homogeneous of degree $1$ in $\xi$, 
smooth outside $\overrightarrow{\xi}=0$ and $d_{x,t}\phi$ 
never vanishes as soon as $|\overrightarrow{\xi}|\neq 0$ 
which implies
that it defines 
a \textbf{phase function} 
in the sense of H\"ormander. 
We first compute the critical set of 
$\phi$ denoted by 
$\Sigma_\phi$ and defined by the equation $\{d_\xi\phi=0\}$:
$$d_\xi (t\vert\overrightarrow{\xi} \vert - x.\overrightarrow{\xi})=t\sum_{\mu=1}^n\frac{\xi_\mu}{\vert\xi\vert}d\xi_\mu-x^\mu d\xi_\mu=0\Leftrightarrow t=\vert x \vert, x^\mu=\frac{\xi_\mu}{\vert\xi\vert}\vert x\vert .$$
We will later see in Chapter 6 that 
this 
defines 
a Morse family $$\left((\mathbb{R}^{n}\setminus\{0\})\times \mathbb{R}^{n+1}\mapsto\mathbb{R}^{n+1}, \phi\right)$$ and the wave front set is parametrized
by the Lagrange immersion $\lambda_\phi\Sigma_\phi$ 
in $T^\star\mathbb{R}^{n+1}$ 
of the critical set defined by the Morse family:
$$\lambda_\phi\Sigma_\phi=\{ (t,x;\partial_t\phi,\partial_x\phi)| \partial_\xi\phi=0 \}$$ 
$$=\{ (x=0,t=0;|\xi|,-\xi ) \}\cup \{(t,x;\vert\xi\vert,-\xi) \vert t=\vert x\vert,x^\mu=\frac{\xi_\mu}{\vert\xi\vert}\vert x\vert,\xi\neq 0 \} $$
$$=\{ (0,0;|\xi|,-\xi ) \}\cup  \{(\vert x\vert,x^\mu;\lambda,-\frac{\lambda x^\mu}{\vert x\vert} ) \vert  \lambda > 0, \vert x\vert\neq 0 \}.$$
To conclude, we see that the sign in front of $t$ in the phase $\phi_{\pm}(t,x;\xi)=\pm t\vert\xi\vert-x.\xi $ 
will decide of the positivity or negativity of the energy of $WF(\Delta_+)$.
\end{proof}
\section{The holomorphic family $\left((x^0+i0)^2-\sum_{i=1}^n (x^i)^2\right)^s$.}
 We give a detailed derivation of the main steps needed for the computation of the wave front set of the family $\left((x^0+i0)^2-\sum_{i=1}^n (x^i)^2\right)^s,s\in\mathbb{C}$ 
and $\log\left((x^0+i0)^2-\sum_{i=1}^n (x^i)^2\right)$ 
using the general theory of boundary values 
of holomorphic functions along convex sets 
developped by H\"ormander \cite{Hormander}. 
The result is given in \cite{Hormander} p.~322 
without any detail, 
also a similar treatment in the literature can be found in \cite{Vladimirov}.
We carefully follow the exposition of \cite{Hormander} $(8.7)$ but we specialize to the simpler case of the quadratic form $Q=(x^0)^2-\sum_{i=1}^n (x^i)^2$ which makes the explanations 
much clearer 
and allows us to give direct arguments.
 
 Let $C^+$ denote the set 
$\{y| Q(y)>0,y^0>0 \}$, 
$C^+$ is an open cone called
the \emph{future cone}. 
We denote by $q$ the unique
symmetric 
bilinear map associated to the quadratic form
$Q$.
 
\paragraph{Microhyperbolicity.}
Given $\theta=(1,0,0,0)$.
We recall that $Q$ is said to be microhyperbolic (see definition 8.7.1 in \cite{Hormander}) w.r.t. $\theta$ in an open set $\Omega\subset\mathbb{R}^n$ if $\forall x\in\mathbb{R}^n,\exists t(x)>0$, 
such that $\forall t, 0<t<t(x)$, $Q(x+it\theta)\neq 0$.
\begin{prop}\label{muhyp}
The quadratic form $Q(x)=(x^0)^2-\sum_{i=1}^n (x^i)^2 $ is microhyperbolic with respect to any vector $\theta\in C^+$. 
\end{prop}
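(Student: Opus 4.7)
The plan is to expand $Q(x+it\theta)$ using the polarization $q$ of $Q$ and separate real and imaginary parts. By bilinearity,
\[ Q(x+it\theta) = Q(x) - t^2 Q(\theta) + 2it\, q(x,\theta), \]
so vanishing of $Q(x+it\theta)$ forces simultaneously $q(x,\theta)=0$ and $Q(x) = t^2 Q(\theta)$. I would then run a short case analysis on $q(x,\theta)$.

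First, if $q(x,\theta) \neq 0$, the imaginary part $2t\,q(x,\theta)$ is nonzero as soon as $t > 0$, and we are done uniformly in $t > 0$, without even restricting to small $t$. Hence we may take $t(x) = +\infty$ in this case.

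The remaining case is $q(x,\theta)=0$, and here the key geometric fact to invoke is that, since $\theta \in C^+$ is timelike ($Q(\theta)>0$), Sylvester's law of inertia applied to $Q$ of signature $(1,n)$ implies that the restriction of $Q$ to the $q$-orthogonal hyperplane $\theta^{\perp}$ is negative definite. Therefore, when $q(x,\theta)=0$ and $x \neq 0$ one has $Q(x) < 0$, while $t^2 Q(\theta) > 0$ for any $t>0$; the real part $Q(x) - t^2 Q(\theta)$ is thus strictly negative, in particular nonzero. If $x = 0$ the real part reduces to $-t^2 Q(\theta) < 0$ for $t > 0$, again nonzero. In every case $Q(x+it\theta) \neq 0$ for all $t>0$, which establishes microhyperbolicity with respect to any $\theta \in C^+$.

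There is no serious obstacle: the only non-trivial ingredient is the signature argument ensuring that $Q|_{\theta^\perp}$ is negative definite, which is standard and amounts to the observation that a maximal positive subspace of $(\mathbb{R}^{n+1},Q)$ is one-dimensional and may be taken to be $\mathbb{R}\theta$. As a byproduct, the proof yields that $Q(x+it\theta) \neq 0$ for all real $t \neq 0$, a stronger statement than needed for Definition $8.7.1$ of \cite{Hormander}, and one which will be convenient when setting up the boundary value $Q(\cdot + i0\theta)^s$ in the next steps of the chapter.
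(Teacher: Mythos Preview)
Your proof is correct and follows essentially the same route as the paper: both expand $Q(x+it\theta)=Q(x)-t^2Q(\theta)+2it\,q(x,\theta)$, observe that vanishing of the imaginary part forces $q(x,\theta)=0$, and then use that $\theta\in C^+$ makes $Q|_{\theta^\perp}$ negative semidefinite so that the real part $Q(x)-t^2Q(\theta)$ is strictly negative. The paper phrases this as a short contradiction argument, while you give a direct case analysis with a bit more explicit justification of the signature step, but the substance is identical.
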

\begin{proof}
We are supposed first to fix a vector $\theta\in C^+$, and we must check $Q$ is microhyperbolic with respect to $\theta$. In fact, we prove a stronger result:
$\forall x,\forall\varepsilon>0$, $Q(x+i\varepsilon\theta)\neq 0$.
If $Q(x+i\varepsilon\theta)=Q(x)-\varepsilon^2Q(\theta)+2i\varepsilon q(x,\theta) =0$ then the imaginary part $\text{Im}Q(x+i\varepsilon\theta)=0$ must vanish hence $q(x,\theta)=0$.
Hence we would have $Q(x)\leqslant 0$ since $\theta\in C^+$ and finally $Q(x+i\varepsilon\theta)=Q(x)-\varepsilon^2Q(\theta)<0$. Contradiction !
\end{proof}
%Microhyperbolicity implies the lemma 
%\begin{lemm}
%$\forall t\neq 0$, we have
%\begin{eqnarray}
%F(x+it\theta)\neq 0
%\\ F(\theta)\neq 0
%\end{eqnarray} 
%\end{lemm}
%\begin{proof}
%The result of lemma $8.7.2$ applies to our case since $Q$ is homogeneous and $\theta$ microhyperbolic.
%\end{proof}
The domain $T^C=\mathbb{R}^{n+1}+ iC^+$ is called a \emph{tube cone}.
We want to define the limits in the sense of distributions $\lim_{y\rightarrow 0,y\in C^+}Q^s(x+iy)$
of the holomorphic function $Q^s$.

\subsubsection{The Vladimirov approach.}

 In the Vladimirov approach which is similar to H\"ormander's, we have to prove
$Q^s$ is slowly increasing in the algebra $\mathcal{O}(T^C)$ of functions holomorphic
in the tube cone $T^C$ ( see \cite{Vladimirov}). 
In fact, in our case, we would have to prove an estimate of the form
\begin{equation}
\forall z=x+iy, \vert Q^s(z)\vert\leqslant \left(1 + d(y,\partial C^+)^{-2Re(s)} \right).
\end{equation}
where $d(y,\partial C^+)$ is defined as the distance beetween $y\in C^+$ and the boundary $\partial C^+$ of the future cone.
Then we know (see Theorem 4 p.~204 in \cite{Vladimirov}) that 
the Fourier Laplace transform $\mathcal{F}$ 
is an algebra isomorphism 
from $\left(\mathcal{O}(T^C),\times\right)$
to the algebra 
$\left(\mathcal{S}^\prime(C^\circ),\star\right)$ 
of tempered distribution
supported
in the dual cone $C^\circ\subset \mathbb{C}^4$
endowed with the \textbf{convolution product}.
However, both the H\"ormander and Vladimirov approaches 
rely on an estimate
which roughly says the holomorphic function $Q^s(z)$
has moderate growth when 
the imaginary part 
$y$ of $z$ tends to zero
in the Tube cone $T^C$.

\paragraph{Stratification of the space of zeros.}

 For a fixed point $x_0\in\mathbb{R}^{n+1}$, we study the Jets of the map $x\mapsto Q(x)$ at the point $x_0$. The Minkowski space $\mathbb{R}^{n+1}$ is partitioned by the \textbf{lowest order of homogeneity} of the Taylor expansion of $Q$.
Lojasiewicz describes this construction as the \textbf{stratification} of the space $\mathbb{R}^{n+1}$ by the \textbf{orders of the zeros} of $Q$.  
We study the Taylor expansion of $Q$ at $x_0$ by looking at the map $y\mapsto Q(x_0+y)$. We find three distinct situations:
\begin{itemize}
\item $Q(x_0)\neq 0$ thus $ Q(x_0+y)=q(x_0,x_0)+O(\vert y\vert)$, the term of lowest homogeneity is $q(x_0,x_0)$ and is homogeneous of degree $0$ in $y$
\item $Q(x_0)=0,x_0\neq 0$ thus $ Q(x_0+y)=2q(x_0,y)+O(\vert y\vert^2)$, the term of lowest homogeneity is $2q(x_0,y)$ and is homogeneous of degree $1$ in $y$
\item $x_0=0$ thus $ Q(0+y)=q(y,y)+O(\vert y\vert^3)$, the term of lowest homogeneity is $q(y,y)$ and is quadratic hence homogeneous of degree $2$ in $y$.
\end{itemize}
Following H\"ormander, we denote by 
$Q_{x_0}(y)$ the term of lowest homogeneity in $y$. 
The term of lowest homogeneity allows 
to construct a geometric structure over $\mathbb{R}^{n+1}$ called the tuboid. 
\paragraph{Construction of the tuboid.}
To 
every $x_0\in\mathbb{R}^{n+1}$, we associate the cone $\Gamma_{x_0}$ (\cite{Hormander} Lemma 8.7.3 ) defined as the connected component of
\begin{equation}
\{y| Q_{x_0}(y)\neq 0  \}
\end{equation} 
which contains the vector $\theta=(1,0,0,0)$. 
\begin{lemm}
Let $Q=(x^0)^2-\sum_{i=1}^n (x^i)^2$ and $\theta=(1,0,0,0)$.  For every $x_0\in\mathbb{R}^{n+1}$, let $\Gamma_{x_0}$ be the cone defined as above.
\begin{itemize}
\item If $Q(x_0)\neq 0$ then $\Gamma_{x_0}=\{y|Q_{x_0}(y)\neq 0 \}=\mathbb{R}^{n+1}$ since the term of lowest homogeneity $Q(x_0)$ does not depend on $y$.
\item If $Q(x_0)=0,x_0\neq 0$ then $\{y|Q_{x_0}(y)\neq 0 \}=\{y|q(x_0,y)\neq  0 \}=\{y|q(x_0,y)>0 \}\bigcup \{y|q(x_0,y)<0 \} $ contains two connected components the upper and lower half spaces associated to $Q(x_0,.)$, $\Gamma_{x_0}=\{y|q(x_0,\theta)q(x_0,y)>0 \}$.
\item If $x_0=0$ then $\Gamma_{x_0}=\{y|q(y,y)>0, y_0>0 \}$, it is the space of all \textbf{future oriented timelike} vectors.
\end{itemize}
\end{lemm}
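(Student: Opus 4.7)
The plan is to analyze each of the three cases separately, using the explicit form of the Taylor expansion of $Q$ at $x_0$ to identify $Q_{x_0}$, and then to determine the connected components of $\{y\,|\,Q_{x_0}(y)\neq 0\}$ by elementary arguments based on the quadratic signature of $Q$.

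For the first case, when $Q(x_0)\neq 0$, the Taylor expansion begins with the order-zero term $Q(x_0)$, which is a nonzero constant in $y$. Thus $Q_{x_0}(y)=Q(x_0)$ never vanishes and $\{y\,|\,Q_{x_0}(y)\neq 0\}=\mathbb{R}^{n+1}$, which is obviously connected, so it coincides with $\Gamma_{x_0}$.

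For the second case, $Q(x_0)=0$ with $x_0\neq 0$. Here the expansion $Q(x_0+y)=2q(x_0,y)+Q(y)$ gives $Q_{x_0}(y)=2q(x_0,y)$, a nonzero linear form since the symmetric bilinear form $q$ associated to $Q$ is non-degenerate and $x_0\neq 0$. Its zero set is the hyperplane $\{y\,|\,q(x_0,y)=0\}$, the complement of which decomposes into the two open half-spaces $\{q(x_0,y)>0\}$ and $\{q(x_0,y)<0\}$. The connected component containing $\theta$ is the one selected by the sign of $q(x_0,\theta)$, which gives the claimed description $\Gamma_{x_0}=\{y\,|\,q(x_0,\theta)\,q(x_0,y)>0\}$.

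The third case, $x_0=0$, is the one requiring actual work. Since $Q$ is homogeneous of degree $2$, the expansion starts at order $2$ and $Q_{0}(y)=Q(y)$. I would then identify the connected components of $\{y\,|\,Q(y)\neq 0\}$: the light cone $\{Q=0\}$ separates $\mathbb{R}^{n+1}$ into the two timelike regions $\{Q>0,\;y^0>0\}$ and $\{Q>0,\;y^0<0\}$ (each convex and hence connected) and the spacelike region $\{Q<0\}$, which for $n\geq 2$ is path connected (any two spacelike vectors can be joined through the sphere $|\vec y|=R,\;y^0=0$ for $R$ large, using that this sphere is connected). Since $Q(\theta)=1>0$ and $\theta^0=1>0$, the component containing $\theta$ is precisely the open future timelike cone $\{y\,|\,Q(y)>0,\;y^0>0\}$, as claimed. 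The only subtle point — and the main obstacle — is the connectedness of the spacelike region, which crucially uses $n\geq 2$; I would simply note this restriction (which is implicit in the QFT setting) and carry out the explicit path argument on the unit pseudo-sphere $\{Q(y)=-1\}$ together with radial scaling, so that separation from the timelike components follows from the intermediate value theorem applied to $Q$.
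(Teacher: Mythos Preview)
Your proof is correct and follows the same direct case-by-case analysis that the paper relies on; in fact the paper does not give a separate proof of this lemma, embedding the brief justifications in the statement itself. Your treatment is actually more careful than the paper's, in particular your explicit discussion of the connectedness of the spacelike region $\{Q<0\}$ and the implicit restriction $n\geq 2$, which the paper passes over in silence.
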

%The cone $\Gamma_{x_0}$ is important because it allows to take boundary values of holomorphic functions for $x$ is close to $x_0$ and $y\rightarrow 0\in\Gamma_{x_0}$.
The domain 
$\Lambda=\{x_0+i\Gamma_{x_0}| x_0\in\mathbb{R}^{n+1}\}\subset \mathbb{C}^4$ 
is called
a \emph{tuboid} in the
terminology
of Vladimirov. 
\paragraph{Choice of the branch of the $\log$ function.} 
In order to define the complex powers $Q^s(x+iy)=e^{s\log Q(x+iy)}$ and $\log Q(x+iy)$, we must specify the branch of the $\log$ function that we use.
We choose the branch of the $\log$ in the domain $0 < \arg Q(z)< 2\pi$, for $Q=(x^0)^2-\sum_{i=1}^n (x^i)^2$. For this determination of the $\log$ (see \cite{Joshi} Proposition 4.1),
by the proof
of Proposition \ref{muhyp}, 
we see that $Q(x+i\varepsilon\theta)$ avoids the positive reals .
\begin{prop}\label{trivialcase}
$\lim_{\varepsilon\rightarrow 0}\log Q(.+i\varepsilon\theta)$ converges to a smooth function in the \textbf{nonconnected open set} $Q\neq 0$. 
\end{prop}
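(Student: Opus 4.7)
The plan is to reduce the statement to elementary facts about the chosen branch of $\log$ by exploiting that the perturbation $Q(x+i\varepsilon\theta)=Q(x)-\varepsilon^2+2i\varepsilon x^0$ is polynomial, hence jointly smooth in $(x,\varepsilon)\in\mathbb{R}^{n+1}\times[0,1]$, and that on the open set $\{Q\neq 0\}$ the unperturbed value $Q(x)$ is real and nonzero. I would first fix an arbitrary compact $K\subset\{Q\neq 0\}$ and note that $\inf_{x\in K}|Q(x)|=c>0$, so for $\varepsilon$ small enough $Q(x+i\varepsilon\theta)$ stays in the complex disc of radius $c/2$ around $Q(x)$, uniformly in $x\in K$; in particular it stays bounded away from $0$.

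Next I would split $\{Q\neq 0\}$ into the three (or four) connected components corresponding to the future timelike cone $\{Q>0,x^0>0\}$, the past timelike cone $\{Q>0,x^0<0\}$ and the spacelike region $\{Q<0\}$, and argue separately on each, using the branch prescription $0<\arg Q(z)<2\pi$. On the spacelike region the imaginary part $2\varepsilon x^0$ is arbitrary but the real part $Q(x)-\varepsilon^2$ is strictly negative on $K$ for small $\varepsilon$, so $Q(x+i\varepsilon\theta)$ stays inside a neighborhood of $\mathbb{R}_-$ on which our $\log$ is holomorphic; by smooth dependence of the composition of smooth maps, the limit is $\log|Q(x)|+i\pi$, smooth on $\{Q<0\}$. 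On the future cone the imaginary part is strictly positive and the real part approaches $Q(x)>0$, so $Q(x+i\varepsilon\theta)$ lies in the upper half plane and converges to a point of the positive real axis with $\arg\to 0^+$; since $\log$ is holomorphic on $\mathbb{C}\setminus\mathbb{R}_+$ and continuous up to $\mathbb{R}_+$ from above in our branch, the limit is the ordinary real $\log Q(x)$. On the past cone the imaginary part is negative, the approach is from below, the argument tends to $2\pi^-$, and the limit is $\log Q(x)+2\pi i$.

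To upgrade pointwise convergence to smoothness on $\{Q\neq 0\}$, I would use that on each component the limiting function is an explicit smooth expression of $Q(x)$, and that $Q(x+i\varepsilon\theta)$ depends smoothly on $(x,\varepsilon)$ with values in a fixed simply connected domain of holomorphy of $\log$ once $\varepsilon$ is small (depending on the compact $K$); hence all $x$-derivatives of $\log Q(x+i\varepsilon\theta)$ converge uniformly on $K$ to the corresponding $x$-derivatives of the limit, by the chain rule and holomorphy of $\log$ on that domain. This yields $C^\infty$ convergence on each component and therefore on the open set $\{Q\neq 0\}$.

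The only subtle point, and the place where care is really needed, is the choice of branch and the transition between the future and past components: the same holomorphic function $\log Q(\cdot+i\varepsilon\theta)$ produces two different boundary values on $\{Q>0\}$, differing by $2\pi i$, because the cut is precisely the positive real axis and the two components approach it from opposite sides. Everything else is a routine application of smooth dependence of holomorphic composition.
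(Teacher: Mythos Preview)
Your proof is correct and follows essentially the same approach as the paper: split $\{Q\neq 0\}$ into the three connected components (spacelike region, future timelike cone, past timelike cone), compute the explicit limit on each using the branch $0<\arg<2\pi$, and observe that the limit is smooth. Your treatment is in fact more careful than the paper's, which simply asserts uniform convergence to a smooth function on a small neighborhood of each point, whereas you explain why the values $Q(x+i\varepsilon\theta)$ stay in a fixed domain of holomorphy of $\log$ and hence why all $x$-derivatives converge uniformly on compacts.
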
 
\begin{proof}
We are going to prove that $\lim\log Q(.+i\varepsilon\theta)\in C^\infty(\{Q\neq 0\})$.
We notice that the set $\{Q(x_0)\neq 0\}$ contains three open connected domains,
and we classify the convergence of $\log Q(.+i\varepsilon\theta)$ on each of these
connected domains:  
\begin{eqnarray}
Q(x_0)<0\implies \forall x\in U_{x_0}, \log Q(x+i\varepsilon\theta)\rightarrow \log \vert Q(x)\vert +i\pi\\
Q(x_0)>0, x_0^0>0, \implies \forall x\in U_{x_0}, \log Q(x+i\varepsilon\theta)\rightarrow \log \vert Q(x)\vert\\ 
Q(x_0)>0, x_0^0<0, \implies \forall x\in U_{x_0}, \log Q(x+i\varepsilon\theta)\rightarrow \log \vert Q(x)\vert  +2i\pi.
\end{eqnarray} 
Thus for every $x_0$ such that $Q(x_0)\neq 0$, there is a small neighborhood of $x_0$ in which the family of analytic functions $\log Q(.+i\varepsilon\theta)$ converges uniformly to a \textbf{smooth} function.
\end{proof} 
We only have to study the case $Q(x_0)=0$.
%  Furthermore the conic set $\{(x+iy)|y\in\Gamma_x \}\subset \mathbb{R}^{3+1}+i\mathbb{R}^{3+1}$ will play the role of the wave front set, actually we will later identify this set with $\{(x,y)|y\in\Gamma_x \}\subset T\mathbb{R}^{3+1}$ and the wave front set will be the dual of this set.
% We shall present now a more geometric discussion of the situation. We first define an analog of a ball bundle $\mathbb{B}_\varepsilon\mathbb{R}^{3+1}=\{(x,y)| x\in\mathbb{R}^{3+1},\Vert y\Vert=\varepsilon \}$ over $\mathbb{R}^{3+1}$.
% 
% We consider $$F:(x,y)\mapsto \left(Q(x,x)-Q(y,y),2Q(x,y) \right)$$ 
%as an algebraic map.
%
% Our goal is to find a conic neighborhood of $\{(x,\tau\theta)| x\in\mathbb{R}^{3+1},\tau\leqslant\varepsilon\}$ in such a way that the range of the algebraic map $F$ avoids the set 
%$$\{Q(x,x)-Q(y,y)\geqslant 0, Q(x,y)=0 \} $$
%which is equivalent to the requirement that $Q(x+iy)$ does not meet the real half line $\mathbb{R}_+$.% For $y\in \Gamma_{x}$ we would like that $Q(x+iy)$ avoids the half positive real line.
\subsubsection{The moderate growth estimate along $T^C$.}
H\"ormander proved 
an important estimate in \cite{Hormander} lemma $8.7.4$ which is a specific case of the celebrated Lojasiewicz inequality. We have to slightly modify his result, actually we prove the estimate of lemma $8.7.4$, plus the property that
$Q(x+iy)$ never meets the positive half line $\mathbb{R}_+$ for $x,y$ in appropriate domains.
Let $\theta=(1,0,0,0)$.  For every $x_0\in\mathbb{R}^{n+1}$ such that $Q(x_0)=0$, let $\Gamma_{x_0}$ be the cone defined as the connected component of
\begin{equation}
\{y| Q_{x_0}(y)\neq 0  \}
\end{equation} 
which contains the vector $\theta$.
\begin{prop}\label{proploja}
For any closed conic subset $V_{x_0}\subset \Gamma_{x_0}$, there exists $\delta,\delta^\prime>0$ and
$U_{x_0}$ is a neighborhood of $x_0$ such that for all 
$(x,y)\in U_{x_0}\times V_{x_0},\vert y\vert\leqslant \delta$ 
the following estimate is satisfied:
\begin{equation}
\exists m\in\mathbb{Z}, \delta^\prime \vert y\vert^m\leqslant \vert Q(x+iy)\vert
\end{equation}  
and $Q(x+iy)$ does not meet $\mathbb{R}_+$.
\end{prop}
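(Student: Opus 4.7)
The plan is to treat separately the two cases $x_0 \neq 0$ (where $Q(x_0)=0$ forces $x_0$ to lie on the null cone) and $x_0 = 0$, since Proposition \ref{trivialcase} has already dispatched $Q(x_0) \neq 0$. The unified workhorse is the decomposition
$$Q(x+iy) = Q(x) - Q(y) + 2iq(x,y),$$
so that $|Q(x+iy)|^2 = (Q(x)-Q(y))^2 + 4q(x,y)^2$ and membership in $\mathbb{R}_+$ requires $q(x,y)=0$ together with $Q(x)>Q(y)$. This is the specialization to the Minkowski form of H\"ormander's Lemma $8.7.4$, with the degree of $Q$ already bounding the expected Lojasiewicz exponent $m$.

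For $x_0 \neq 0$ on the null cone, the term of lowest homogeneity is the linear form $Q_{x_0}(y) = 2q(x_0,y)$ and $\Gamma_{x_0}$ is the open half-space $\{y : q(x_0,\theta)q(x_0,y)>0\}$. Compactness of $V_{x_0}\cap\mathbb{S}^n$ inside this open half-space gives $c>0$ with $|q(x_0,y)| \geq c|y|$ on $V_{x_0}$; by continuity of $x \mapsto q(x,y)$ and compactness of the unit section, a small enough neighborhood $U_{x_0}$ preserves both the sign of $q(\cdot,y)$ and the lower bound $|q(x,y)|\geq (c/2)|y|$. Hence $|Q(x+iy)|\geq c|y|$, giving $m=1$, while the imaginary part $2q(x,y)$ has constant sign and confines $Q(x+iy)$ to an open half-plane disjoint from $\mathbb{R}_+$.

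The delicate case is $x_0=0$, where $\Gamma_{x_0}=C^+$ is the full open future cone. For any closed conic $V\subset C^+$, $V\cap\mathbb{S}^n$ is compact in $C^+$, hence $Q(\hat y) \geq c>0$ for $\hat y = y/|y|\in V$. I split $x = \alpha\hat y + x^\perp$ with $q(x^\perp,\hat y)=0$; since $\hat y$ is timelike, its $q$-orthogonal complement is spacelike and $v := -Q(x^\perp)\geq 0$. Setting $u = Q(\hat y)$ and $s = \alpha/|y|$, a direct algebraic manipulation collapses the sum of squares into the closed form
$$|Q(x+iy)|^2 = \bigl[|y|^2 u(s^2+1) - v\bigr]^2 + 4|y|^2 u v.$$
Minimizing the right-hand side over $v\geq 0$, with a case split according to whether $s^2\leq 1$ (minimum at $v=0$, value $|y|^4 u^2(s^2+1)^2$) or $s^2\geq 1$ (minimum at $v=|y|^2u(s^2-1)$, value $4|y|^4 u^2 s^2$), yields in both regimes the uniform lower bound $|Q(x+iy)|^2\geq c^2|y|^4$. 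So $m=2$, and any bounded neighborhood $U_{x_0}$ of the origin works since the estimate is in fact global in $x$. For the $\mathbb{R}_+$ exclusion, if $q(x,y)=0$ for $y\in C^+\setminus\{0\}$, then $x\in y^\perp$ is spacelike, so $Q(x)\leq 0<Q(y)$ and the real part $Q(x)-Q(y)$ is strictly negative.

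The main obstacle is precisely the quantitative Lojasiewicz bound in the case $x_0=0$: qualitative nonvanishing of $Q(x+iy)$ on the tube $\mathbb{R}^{n+1}+iC^+$ is immediate from microhyperbolicity (Proposition \ref{muhyp}), but extracting the polynomial lower bound in $|y|$ forces one to use the Lorentzian reverse Cauchy--Schwarz packaged in the splitting $x = \alpha\hat y + x^\perp$ and the algebraic identity that rewrites $|Q(x+iy)|^2$ as a square plus a nonnegative cross term.
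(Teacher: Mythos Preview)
Your proof is correct. For the null-cone case $Q(x_0)=0$, $x_0\neq 0$, you argue exactly as the paper does: compactness of $V_{x_0}\cap\mathbb{S}^n$ inside the open half-space gives a uniform lower bound on $|q(x_0,y)|$, which persists on a neighborhood $U_{x_0}$, and the nonvanishing imaginary part keeps $Q(x+iy)$ off the real axis.

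For the vertex case $x_0=0$ you take a genuinely different route. The paper first checks the inequality $|Q(x+iy)|\geq |Q(y)|$ for $y=t\theta$ by observing that $(x,t)\mapsto |Q(x+it\theta)|$ has its unique critical point at $x=0$, and then transports this to arbitrary $y\in C^+$ by $O(n,1)_+^\uparrow$ invariance; the final bound $Q(y)\geq c|y|^2$ on $V_0$ comes from compactness of the spherical section. You instead decompose $x=\alpha\hat y + x^\perp$ along a timelike direction $\hat y\in V_0$, derive the closed-form identity
$$|Q(x+iy)|^2=\bigl[|y|^2u(s^2+1)-v\bigr]^2+4|y|^2uv,\qquad u=Q(\hat y),\ v=-Q(x^\perp)\geq 0,\ s=\alpha/|y|,$$
and minimize over $v\geq 0$. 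Both arguments land on the same sharp inequality $|Q(x+iy)|\geq Q(y)$ (your bound $|Q(x+iy)|^2\geq |y|^4u^2$ is precisely this), so $m=2$ in either case. The paper's argument is shorter and exploits the symmetry directly; yours is more hands-on, avoids invoking the group action, and makes the global-in-$x$ nature of the estimate completely transparent. Your treatment of the $\mathbb{R}_+$ exclusion at the vertex (using that $q(x,y)=0$ with $y$ timelike forces $x$ spacelike, hence $Q(x)-Q(y)<0$) is also a clean variant of the paper's reasoning.
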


\begin{proof}
We fix $x_0$. 
We also prove that we can choose $U_{x_0}$ 
in such a way that $U_{x_0}\times V_{x_0}$ 
tends to $\{x_0\}\times \Gamma_{x_0}$ 
for any net of cones $V_{x_0}$ which converges to $\Gamma_{x_0}$.
We study the two usual cases:

 if $Q(x_0)=0,x_0\neq 0$, any closed cone $V_{x_0}$ contained in 
$$\Gamma_{x_0}=\{y|q(x_0,\theta)q(x_0,y)>0 \}$$ 
should be contained in 
$$\{y|q(x_0,\theta)q(x_0,y)\geqslant 2\delta \vert y\vert \}$$ 
for some $\delta>0$ small enough (when $\delta\rightarrow 0$ we recover $\Gamma_{x_0}$).
Let us consider
the continuous map 
$f:=x\mapsto \inf_{y\in V_{x_0},\vert y\vert=1}q(x_0,\theta)q(x,y)$.
By definition of $V_{x_0}$, $f(x_0)\geqslant 2\delta$ 
therefore the set
$f^{-1}[\delta,+\infty)$ 
contains a neighborhood of $x_0$.
%w.l.o.g. we may assume that $\vert y\vert=1$ thus 
%for all $y\in V_{x_0},\vert y\vert=1$, $\sup_{y\in V_{x_0},\vert y\vert=1}q(x_0,\theta)q(x_0,y)\geqslant 2\delta$. ,
We set 
$U_{x_0}=f^{-1}[\delta,+\infty)=\{x | \forall y\in V_{x_0}, q(x_0,\theta)q(x,y)\geqslant \delta \vert y\vert \}$,
then $U_{x_0}$ is a neighborhood of $x_0$.
It is immediate by definition of $U_{x_0}$ 
that for all $(x,y)\in U_{x_0}\times V_{x_0}$, 
we have $\vert q(x,y)\vert
\geqslant \delta\vert q(x_0,\theta)\vert^{-1} \vert y\vert$
which is 
the moderate growth estimate and we also find that 
$\text{Im }Q(x+iy)=2q(x,y)$ never vanishes. 
Thus $Q(x+iy)$ avoids $\mathbb{R}_+$.

 If $x_0=0$ then $ \Gamma_{0}=\{y|q(y,y)>0, y_0>0 \}$ 
is the space of all \textbf{future oriented timelike} vectors. 
If we set $y=t\theta,\theta=(1,0,0,0)$, we find that 
\begin{equation}
\forall x, \vert Q(x+iy)\vert\geqslant \vert Q(y)\vert 
\end{equation}
in fact the unique \textbf{critical point} of the map $(x,t)\mapsto Q(x+it\theta)$ is the point $x=0$.
But then this inequality is invariant by the group $O_{+}^\uparrow(n,1)$ of time and orientation preserving Lorentz transformations. Thus the previous estimate is true for any $y\in \Gamma_{0}$ and reads:
\begin{equation}
\forall x,\forall y\in \Gamma_0,  \vert Q(x+iy)\vert\geqslant \vert Q(y)\vert .
\end{equation}
To properly conclude, we use the fact that $y$ is contained in a closed subcone $V_{0}$ of the interior future cone $q(y,y)>0$, thus there is a constant $\delta<1$ such that
$$(x,y)\in K \implies  \sum_{i=1}^n (y^i)^2\leqslant \delta (y^0)^2 $$ 
this implies the estimates
$$\sum_{\mu=0}^n (y^\mu)^2=(y^0)^2+\sum_{i=1}^n (y^i)^2
\leqslant (1+\delta)(y^0)^2 \implies  (y^0)^2\geqslant\frac{\sum_{\mu=0}^n (y^\mu)^2}{1+\delta} $$
and also the estimate $\forall (x,y)\in U_{0}\times V_{0},$ where $U_0=\vert x\vert\leqslant\delta$: $$q(y,y)=(y^0)^2-\sum_{i=1}^n (y^i)^2\geqslant (y^0)^2-\delta (y^0)^2\implies q(y,y)\geqslant (1-\delta)(y^0)^2 $$
finally, combining the two previous estimates gives 
$$ \frac{(1-\delta)\sum_{\mu=0}^n (y^\mu)^2}{1+\delta} \leqslant q(y,y) ,$$ 
which yields the inequalities, $\forall (x,y)\in U_{0}\times V_{0}$:
\begin{equation}\label{keyineq}  
\frac{(1-\delta)\sum_{\mu=0}^n (y^\mu)^2}{1+\delta} \leqslant q(y,y)\leqslant\vert Q(x+iy)\vert,
\end{equation} 
setting $\delta^\prime=\frac{1-\delta}{1+\delta}$ 
proves the claim.
\end{proof}

\begin{coro}
Thus for all $y\in\Gamma_x$, $\log Q(x+iy)$ and $Q^s(x+iy)$ are well defined analytic functions of the variable $z=x+iy$ for the \textbf{branch} of the $\log$: $0<\arg Q(z)<2\pi $. 
\end{coro}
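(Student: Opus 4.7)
The plan is to verify two complementary facts: (a) for every admissible pair $(x,y)$ with $y\in\Gamma_x$, the complex number $Q(x+iy)$ avoids the positive real axis $\mathbb{R}_+$, which is precisely the branch cut of the chosen determination $0<\arg<2\pi$ of the logarithm; and (b) the resulting composite functions $z\mapsto \log Q(z)$ and $z\mapsto Q^s(z):=e^{s\log Q(z)}$ are holomorphic on the tuboid $\Lambda=\{x_0+i\Gamma_{x_0}\mid x_0\in\mathbb{R}^{n+1}\}$.

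For (a), I would split along the stratification already introduced. When $Q(x_0)\neq 0$, Proposition \ref{trivialcase} already pins down the three connected pieces and shows by direct inspection that $\log Q(x+i\varepsilon\theta)$ has a smooth limit there; the same computation shows $Q(x+iy)\notin \mathbb{R}_+$ in a neighborhood of $x_0$ for all sufficiently small $y$ in $\Gamma_{x_0}=\mathbb{R}^{n+1}$. When $Q(x_0)=0$, I would invoke Proposition \ref{proploja} directly: its proof produces, for every closed subcone $V_{x_0}\Subset\Gamma_{x_0}$, a neighborhood $U_{x_0}$ of $x_0$ and a radius $\delta>0$ such that $Q(x+iy)$ avoids $\mathbb{R}_+$ on $U_{x_0}\times(V_{x_0}\cap\{|y|\leqslant\delta\})$. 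Since an arbitrary $y\in\Gamma_x$ always lies in some closed subcone of $\Gamma_x$, and since both $\Gamma_{x_0}$ and the estimate scale covariantly under the dilation $y\mapsto \lambda y$ (the quadratic form $Q$ is homogeneous and $\Gamma_{x_0}$ is a cone), we may first normalize $|y|\leqslant\delta$ by choosing a representative and then rescale back; the property $Q(x+iy)\notin\mathbb{R}_+$ is invariant under this rescaling because $Q(x+i\lambda y)=\lambda^2 Q(\lambda^{-1}x+iy)$ and $\lambda>0$. Patching these local statements through the tuboid then gives the global avoidance.

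For (b), analyticity is essentially automatic once (a) is in place: $z\mapsto Q(z)=(z^0)^2-\sum_{i=1}^n(z^i)^2$ is a polynomial hence entire on $\mathbb{C}^{n+1}$, and the chosen branch of $\log$ is holomorphic on $\mathbb{C}\setminus\mathbb{R}_+$. Since $Q(\Lambda)\subset\mathbb{C}\setminus\mathbb{R}_+$ by step (a), the composition $\log\circ Q$ is holomorphic on $\Lambda$, and $Q^s=\exp(s\log Q)$ inherits holomorphy both in $z$ and (jointly) in the parameter $s\in\mathbb{C}$.

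The only real obstacle is the bookkeeping in step (a): one must check that the local neighborhoods $U_{x_0}$ and closed subcones $V_{x_0}$ glue into the global tuboid $\Lambda$ without losing the avoidance of $\mathbb{R}_+$. This is where the homogeneity argument is used to remove the artificial radius condition $|y|\leqslant\delta$ coming from Proposition \ref{proploja}, so that the conclusion extends to the whole of $\Gamma_x$ and not merely to a small ball inside it. Once this is done, the corollary follows at once.
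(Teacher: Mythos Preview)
Your overall structure (avoidance of the cut $\mathbb{R}_+$, then composition with the holomorphic branch of $\log$) is the intended one, and the paper offers no proof beyond the word ``Thus''. Part (b) is fine. But there are two issues in part (a).

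First, the rescaling argument has a genuine gap. The identity $Q(x+i\lambda y)=\lambda^2 Q(\lambda^{-1}x+iy)$ is correct, but to deduce $Q(\lambda^{-1}x+iy)\notin\mathbb{R}_+$ from Proposition~\ref{proploja} you need $(\lambda^{-1}x,y)$ to lie in some $U_{x_0}\times V_{x_0}$ with $|y|\leqslant\delta_0$; however the base point has moved from $x$ to $\lambda^{-1}x$, and the $\delta_0$ furnished by the proposition at that new point depends on $\lambda$. You have not shown any uniformity that would close this loop. Fortunately no rescaling is needed: when $Q(x)=0$ the avoidance holds for \emph{all} $y\in\Gamma_x$ by direct inspection. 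If $x\neq 0$, then $\Gamma_x=\{y:q(x,\theta)q(x,y)>0\}$, so $q(x,y)\neq 0$ and $\mathrm{Im}\,Q(x+iy)=2q(x,y)\neq 0$; if $x=0$, then $\Gamma_0=C^+$ and $Q(iy)=-Q(y)<0$. In either case $Q(x+iy)\notin\mathbb{R}_+$.

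Second, your treatment of the region $Q(x_0)\neq 0$ is incorrect: Proposition~\ref{trivialcase} does \emph{not} show $Q(x+iy)\notin\mathbb{R}_+$ for all small $y\in\Gamma_{x_0}=\mathbb{R}^{n+1}$. Indeed for $Q(x_0)>0$ and $y$ $q$-orthogonal to $x_0$, $Q(x_0+iy)=Q(x_0)-Q(y)$ is real and positive for small $y$. This is not a contradiction with the paper: that proposition handles $\{Q\neq 0\}$ by exhibiting the smooth boundary value directly (approaching along $i\varepsilon\theta$), not via avoidance of the cut. The corollary, sitting right after Proposition~\ref{proploja}, is implicitly about the singular locus $Q(x)=0$, where the direct computation above suffices.
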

The tube cone $T^C$ is $O(n,1)_+^{\uparrow}$ \textbf{invariant} thus our arguments would be still valid for any vector $\theta$ in the orbit of $(1,0,0,0)$ by $O(n,1)_+^{\uparrow}$. Thus all results of proposition \ref{proploja} are \textbf{independent} of the choice of $\theta$ in the open cone $Q(\theta)>0,\theta^0>0$.
The key inequality (\ref{keyineq}) also 
appears in a less precise form 
in the proof 
of Proposition 4.1 p.~352 in \cite{Joshi}.
\paragraph{Partial results by the Vladimirov approach.}
In the course of the proof of proposition (\ref{proploja}), 
we rediscovered the Lorentz invariant
inequality $ \forall z=x+iy\in T^C, \vert Q(z)\vert\geqslant \vert Q(y)\vert $. We notice that
$\forall y\in C, Q(y)=2\Delta^2(y)$ where $\Delta(y)=\left(\frac{(y^0)^2-\vert y\vert^2}{2} \right)^{\frac{1}{2}}$ is the Euclidean distance beetween $y$ and the boundary of $C$. Immediately, we deduce that for $Re(s)\leqslant 0$:
$$\vert\left(Q(z)\right)^s\vert \leqslant (2\Delta^2(y))^{Re(s)}\leqslant M(s) (1+\Delta^{2Re(s)}(y)),$$
this means $Q^s$ is in the algebra $H(C)$ of slowly increasing functions in $O(T^C)$ (where $O(T^C)$ is the algebra of holomorphic functions in $T^C$). 
Application of theorems of Vladimirov proves 
the existence of a boundary value $\underset{y\rightarrow 0, z=x+iy\in T^C}{\lim}Q^s(z)$
in the space of tempered distributions 
when $y\rightarrow 0$ in $C$. 
The limit is understood as a tempered distribution and also the Fourier transform of $Q^s$ is a tempered distribution in $\mathcal{S}^\prime(C^\circ)$ which is the algebra for 
the convolution product of Schwartz distributions supported on the dual cone $C^\circ$ of $C$.
In the terminology of Yves Meyer, the boundary value $Q^s(.+i0\theta)$ is $C^\circ$ \emph{holomorphic}.
\paragraph{Existence of the boundary value as a distribution.}  
The previous estimates allow us to prove a moderate growth property which is the requirement to apply Theorems 3.1.15 and 8.4.8 in \cite{Hormander} giving existence of Boundary values and control of the wave front set:
\begin{prop}
For any closed conic subset $V_{x_0}\subset \Gamma_{x_0}$, there exists a sufficiently small neighborhood $U_{x_0}$ of $x_0$ such that for all $x+iy\in U_{x_0}+ iV_{x_0},\vert y\vert\leqslant \delta$, 
\begin{eqnarray}
\vert\log(Q(x+iy))\vert \leqslant \frac{C}{\vert y\vert}\\   
\vert Q^s(x+iy)\vert\leqslant C\vert y\vert^{2Re(s)}
\end{eqnarray}
\end{prop}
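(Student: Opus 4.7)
The plan is to deduce both estimates from the Lojasiewicz-type inequality already established in Proposition \ref{proploja}, combined with the trivial polynomial upper bound on $Q$ and the fact that the chosen branch of the logarithm is bounded.

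First I would fix the neighborhoods exactly as in Proposition \ref{proploja}: given a closed conic subcone $V_{x_0}\subset \Gamma_{x_0}$, choose $U_{x_0}$ and $\delta>0$ so that for all $(x,y)\in U_{x_0}\times V_{x_0}$ with $|y|\leqslant \delta$ one has $Q(x+iy)\notin \mathbb{R}_+$ and
\begin{equation*}
\delta'\,|y|^{m}\leqslant |Q(x+iy)|,
\end{equation*}
with $m=1$ in the stratum $Q(x_0)=0$, $x_0\neq 0$, and $m=2$ in the stratum $x_0=0$. Since $Q$ is a polynomial of degree $2$ and $U_{x_0}$, $V_{x_0}\cap\{|y|\leqslant\delta\}$ are bounded, I also get the trivial upper bound $|Q(x+iy)|\leqslant M$ for a constant $M$ depending only on $U_{x_0}$ and $\delta$. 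Everything then reduces to elementary estimates on the chosen branch $0<\arg Q(z)<2\pi$ of the logarithm.

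For the logarithmic estimate, write $\log Q(x+iy)=\log|Q(x+iy)|+i\arg Q(x+iy)$. Since $\arg Q$ is trapped in $(0,2\pi)$, the imaginary part is bounded by $2\pi$. For the real part, the Lojasiewicz lower bound and the polynomial upper bound give
\begin{equation*}
\bigl|\log|Q(x+iy)|\bigr|\leqslant \max\bigl(|\log M|,\ m|\log|y||+|\log\delta'|\bigr),
\end{equation*}
and since $|\log|y||\leqslant C_0/|y|$ for $|y|\leqslant \delta$ (shrinking $\delta$ if necessary), we obtain $|\log Q(x+iy)|\leqslant C/|y|$.

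For the estimate on $Q^{s}(x+iy)=e^{s\log Q(x+iy)}$, take absolute values:
\begin{equation*}
|Q^{s}(x+iy)|=|Q(x+iy)|^{\mathrm{Re}\,s}\,\exp\bigl(-\mathrm{Im}\,s\cdot\arg Q(x+iy)\bigr).
\end{equation*}
The factor $\exp(-\mathrm{Im}\,s\cdot\arg Q)$ is bounded by $e^{2\pi|\mathrm{Im}\,s|}$, a constant depending only on $s$. For $\mathrm{Re}\,s\leqslant 0$ the Lojasiewicz lower bound gives $|Q(x+iy)|^{\mathrm{Re}\,s}\leqslant (\delta')^{\mathrm{Re}\,s}\,|y|^{m\,\mathrm{Re}\,s}$, and since $|y|\leqslant\delta\leqslant 1$ and $m\leqslant 2$ we have $|y|^{m\,\mathrm{Re}\,s}\leqslant |y|^{2\,\mathrm{Re}\,s}$, which yields the claim; for $\mathrm{Re}\,s\geqslant 0$ the trivial upper bound gives $|Q|^{\mathrm{Re}\,s}\leqslant M^{\mathrm{Re}\,s}$, again absorbed into the constant $C$.

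The only real subtlety I expect is the uniformity: one has to check that the same $U_{x_0}$ and $\delta$ work for both strata near the origin, so that the worst exponent $m=2$ governs the global estimate; but this is exactly what Proposition \ref{proploja} was set up to provide, and the two inequalities then follow by the purely algebraic manipulations above.
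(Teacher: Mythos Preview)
Your proposal is correct and follows essentially the same route as the paper: reduce to the lower bound $\delta'|y|^m\leqslant |Q(x+iy)|$ from Proposition \ref{proploja}, control the argument of $Q$ by the branch cut, and then do elementary algebra on $\log$ and on $|Q|^{\mathrm{Re}\,s}$. You are in fact slightly more careful than the paper on two points: you track the factor $e^{-\mathrm{Im}\,s\cdot\arg Q}$ explicitly, and you make the passage from $|y|^{m\,\mathrm{Re}\,s}$ to $|y|^{2\,\mathrm{Re}\,s}$ explicit via $m\leqslant 2$ and $|y|\leqslant 1$. One minor slip: for $\mathrm{Re}\,s>0$ the bound $|Q|^{\mathrm{Re}\,s}\leqslant M^{\mathrm{Re}\,s}$ cannot be ``absorbed into $C|y|^{2\mathrm{Re}\,s}$'' since the latter tends to zero; but the paper itself only treats $\mathrm{Re}\,s\leqslant 0$, which is the case needed for the application to Theorem 3.1.15 of H\"ormander.
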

Thus the hypothesis of theorem $3.1.15$ of \cite{Hormander} are satisfied for $\log(Q(z)),Q^s(z)$.
\begin{proof} 
Since $\forall (x,y)\in U_{x_0}\times V_{x_0}, 0<\vert y\vert\leqslant \delta$, we have $Q(x+iy)\notin\mathbb{R}_+$, then we must have $\log Q(x+iy)= \log\vert Q(x+iy)\vert + i \text{arg}\left(Q(x+iy)\right) $
where $0<\text{arg}(Q)<2\pi$ which implies $\vert \log Q(x+iy)\vert < \log\vert Q(x+iy)\vert +2\pi $.
Recall that we have estimates of the form $$\forall (x,y)\in U_{x_0}\times V_{x_0}, 0<\vert y\vert\leqslant \delta, \delta\vert y\vert^m \leqslant \vert Q(x+iy)\vert $$
We can assume without loss of generality that $0<C\vert y\vert^m<1$ and $\vert Q(x+iy)\vert \leqslant 1$. 
%We can only restrict to this case when taking boundary values because when $y\rightarrow 0$ in the cone $\Gamma$, if we had $\lim_{y\rightarrow 0}\vert Q(x+iy)\vert\geqslant 1$ then there would be no extension problem, the distributions $\log Q,Q^s$ would already exist as we proved in proposition \ref{trivialcase} !
Then we have 
$$\forall (x,y)\in U_{x_0}\times V_{x_0}, 0<\vert y\vert\leqslant \delta, \delta\vert y\vert^m \leqslant \vert Q(x+iy)\vert  \implies\vert Q^s(x+iy)\vert\leqslant \left(\delta\vert y\vert^m\right)^{Re(s)} $$
for $Re(s)\leqslant 0$.
And also 
$\forall (x,y)\in U_{x_0}\times V_{x_0}, 0<\vert y\vert\leqslant \delta, \delta\vert y\vert^m \leqslant \vert Q(x+iy)\vert$ $$ \implies \log\delta\vert y\vert^m\leqslant\log\vert Q(x+iy)\vert\implies \vert\log\vert Q(x+iy)\vert\vert \leqslant \vert\log\left(\delta\vert y\vert^m\right)\vert.$$
Thus we find 
$$\vert\log Q(x+iy)\vert \leqslant 2\pi+ \vert\log \delta\vert + m\vert\log(\vert y\vert)\vert .$$
\end{proof} 
\begin{coro}\label{fundacoro}
Application of Theorem 3.1.15 in \cite{Hormander} implies 
$Q^s(.+i0y)$ and $\log Q(.+i0y)$ for $y\in\Gamma$ are both well defined on $\mathbb{R}^{n+1}$ as boundary values of holomorphic functions.
\end{coro}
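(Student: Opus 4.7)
The plan is to apply H\"ormander's Theorem 3.1.15 locally and then glue the resulting local distributions into a single global one. Outside the light cone, i.e. on the open set $\{Q\neq 0\}$, Proposition \ref{trivialcase} already provides a smooth boundary value, so nothing remains to prove there. The real work is concentrated on the set $\{Q=0\}$, where the function $Q^s$ (resp.\ $\log Q$) is genuinely singular.

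First I would fix $x_0\in\{Q=0\}$ and a closed conic subcone $V_{x_0}\Subset \Gamma_{x_0}$ containing $\theta=(1,0,0,0)$. The previous proposition gives constants $C,\delta>0$ and a neighborhood $U_{x_0}$ of $x_0$ such that, for $x+iy\in U_{x_0}+iV_{x_0}$ with $|y|\leqslant \delta$,
\begin{equation*}
|Q^s(x+iy)|\leqslant C|y|^{2\mathrm{Re}(s)},\qquad |\log Q(x+iy)|\leqslant C(1+|\log|y||).
\end{equation*}
Since both functions are holomorphic in $U_{x_0}+iV_{x_0}$ (by our choice of branch and Proposition \ref{muhyp}) and grow at most like an inverse power of $|y|$ as $y\to 0$ in $V_{x_0}$, the hypotheses of \cite{Hormander} Theorem 3.1.15 are met. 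Consequently the limits
\begin{equation*}
Q^s(\,\cdot\,+i0y)|_{U_{x_0}}:=\lim_{\substack{y\to 0\\ y\in V_{x_0}}} Q^s(\,\cdot\,+iy),\quad \log Q(\,\cdot\,+i0y)|_{U_{x_0}}:=\lim_{\substack{y\to 0\\ y\in V_{x_0}}}\log Q(\,\cdot\,+iy)
\end{equation*}
exist in $\mathcal{D}'(U_{x_0})$, and H\"ormander's theorem moreover guarantees independence of the choice of $V_{x_0}$ inside $\Gamma_{x_0}$: two admissible subcones can be joined by a continuous deformation through admissible subcones, so the corresponding boundary values coincide.

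Next I would patch together these local distributions. Covering $\mathbb{R}^{n+1}$ by the smooth region $\{Q\neq 0\}$ (already handled) together with the neighborhoods $U_{x_0}$ built above, it remains to check that the local boundary values agree on overlaps $U_{x_0}\cap U_{x_1}$. This is where the fact that $\Gamma$ is assumed to lie in the intersection $\bigcap_{x\in \mathbb{R}^{n+1}}\Gamma_x=C^+$ enters: taking $y$ in the global cone $C^+\subset \Gamma_{x_0}\cap\Gamma_{x_1}$, the same holomorphic function is used in both local constructions, so the distributional limits must agree on the overlap. A partition of unity subordinate to the cover then assembles the local distributions into well-defined elements $Q^s(\,\cdot\,+i0y),\,\log Q(\,\cdot\,+i0y)\in \mathcal{D}'(\mathbb{R}^{n+1})$, and these depend only on the connected component $C^+$ of $y$, not on $y$ itself, again by continuous deformation inside $C^+$.

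The main obstacle I anticipate is the patching step: one must ensure that the cone $V_{x_0}$ used locally can always be shrunk to contain the global direction $\theta\in C^+$ while keeping the estimate of Proposition \ref{proploja} uniform. The proof of that proposition was precisely designed so that the constants $\delta,\delta'$ remain uniform as $V_{x_0}$ shrinks toward any fixed direction in $\Gamma_{x_0}$, so this obstacle is built into the preceding results. Once the boundary distribution is assembled, one reads off from the proof that it is tempered (because the bounds are polynomial at infinity in $x$), matching the expectation from the Vladimirov approach discussed earlier.
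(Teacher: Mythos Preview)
Your proposal is correct and follows the natural route: establish the local moderate-growth bounds from the preceding proposition, apply H\"ormander's Theorem 3.1.15 locally, and glue. The paper gives essentially no proof for this corollary beyond the sentence ``Thus the hypothesis of theorem 3.1.15 of \cite{Hormander} are satisfied for $\log(Q(z)),Q^s(z)$,'' so your argument is simply a more explicit version of what the paper leaves implicit; the only minor point is that Proposition~\ref{trivialcase} as stated handles $\log Q$ rather than $Q^s$ on $\{Q\neq 0\}$, but smoothness of $Q^s$ there is of course immediate for the same reason.
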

The proof that $Q^s(.+i0y)$ defines a 
tempered distribution is only
sketched in \cite{Joshi} Proposition 4.1 and it is proved in \cite{KKK}
in example 2.4.3 p.~90 that 
these are hyperfunctions 
in the sense of Sato but this is not enough
to prove these are genuine distributions.
Notice that the existence and definition of the boundary values $Q^s(.+i0y)$ and $\log Q(.+i0y)$ \textbf{does not depend} on the choice of $y$ provided $y$ lives in the open cone $C^+$, 
but since this cone is $O(n,1)_+^{\uparrow}$ invariant, the distributions $Q^s(.+i0y)$ and $\log Q(.+i0y)$ are $O(n,1)_+^{\uparrow}$ invariant. 
\paragraph{The wave front set of the boundary value.}
\begin{thm}
The wave front set of $Q^s(.+i0\theta)$ and $\log Q(.+i0\theta)$
is contained in the set:
\begin{equation}
\{(x;\tau dQ) \vert \tau x^0>0, Q(x)=0 \}\bigcup \{(0;\xi) \vert Q(\xi,\xi)\geqslant 0,\xi_0>0 \}.
\end{equation}
\end{thm}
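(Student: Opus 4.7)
The plan is to apply H\"ormander's theorem on wave front sets of boundary values of holomorphic functions, in its refined tuboid version (\cite{Hormander} Theorem 8.4.8 together with its localization via Theorem 8.1.6): if $u$ is obtained as the limit, in the sense of distributions, of a holomorphic function $f$ defined on the tuboid $\Lambda=\bigcup_{x_0}\{x_0+i\Gamma_{x_0}\}$ and satisfying moderate growth estimates uniformly on conic subtuboids, then the wave front set of $u$ over $x_0$ is contained in the Euclidean dual cone $\Gamma_{x_0}^{\ast}=\{\xi\in T^{\bullet}_{x_0}\mathbb{R}^{n+1}\,|\,\langle\xi,y\rangle\geqslant 0\;\forall y\in\Gamma_{x_0}\}$. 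The required hypotheses have already been established: Proposition \ref{proploja} provides the moderate growth estimate on each conic neighborhood $U_{x_0}\times V_{x_0}$, and Corollary \ref{fundacoro} yields the existence of $Q^s(\cdot+i0\theta)$ and $\log Q(\cdot+i0\theta)$ as tempered distributions. So the only remaining task is to compute the fiber $\Gamma_{x_0}^{\ast}$ explicitly in each of the three strata of the partition of $\mathbb{R}^{n+1}$ by the order of vanishing of $Q$ at $x_0$.

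I would then go through the three strata in turn. First, when $Q(x_0)\neq 0$, Proposition \ref{trivialcase} already shows that $\log Q(\cdot+i0\theta)$ and $Q^s(\cdot+i0\theta)$ are smooth in an open neighborhood of $x_0$, so no wave front contribution arises in this case. Second, when $Q(x_0)=0$ and $x_0\neq 0$, the cone $\Gamma_{x_0}=\{y\,|\,q(x_0,\theta)q(x_0,y)>0\}$ is the open half-space $\{y\,|\,\ell(y)>0\}$ associated with the linear form $\ell(y)=x_0^0\,q(x_0,y)$; its Euclidean dual is the ray $\mathbb{R}_+\ell^{\ast}$. Using $\theta=(1,0,\dots,0)$ so that $q(x_0,\theta)=x_0^0$, and identifying $q(x_0,\cdot)=\tfrac{1}{2}dQ(x_0)$ via the Euclidean inner product, the ray becomes $\{\tau\,dQ(x_0)\,|\,\tau x_0^0\geqslant 0\}$. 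Removing the zero section gives the stated contribution $\{(x_0;\tau dQ(x_0))\,|\,\tau x_0^0>0,\,Q(x_0)=0\}$.

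Third, when $x_0=0$, the cone $\Gamma_0$ is the open forward light cone $C^+=\{y\,|\,Q(y)>0,\,y^0>0\}$. A direct Euclidean computation (parametrizing $y=(t,r\hat e)$ with $t>r$ and minimizing over directions $\hat e$) shows that its dual is the closed forward light cone $C^{+\ast}=\{\xi\,|\,\xi^0\geqslant|\xi_{\mathrm{spatial}}|\}=\{\xi\,|\,Q(\xi,\xi)\geqslant 0,\,\xi^0\geqslant 0\}$. Intersecting with $T^{\bullet}\mathbb{R}^{n+1}$ gives precisely $\{(0;\xi)\,|\,Q(\xi,\xi)\geqslant 0,\,\xi_0>0\}$, the second piece of the claimed set. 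Taking the union over the three strata yields the full inclusion.

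The main obstacle I anticipate is the careful verification of the tuboid version of the boundary value theorem together with the correct sign conventions relating boundary values from a cone $\Gamma$ to its Euclidean dual $\Gamma^{\ast}$. In particular, one must check, using a toy one-dimensional model such as $(x+i0)^{-1}$, that the Fourier convention adopted yields $WF\subset\Gamma^{\ast}$ rather than $-\Gamma^{\ast}$; this fixes the sign $\tau x^0>0$ rather than $\tau x^0<0$ in the second stratum, reflecting the positivity of energy encoded in the choice $\theta\in C^+$. The computation of the dual of the half-space in Case~2 is the only delicate linear-algebraic step, but it reduces, via the identification $\ell^{\ast}\propto x_0^0\,dQ(x_0)$, to the simple observation $q(x_0,\theta)=x_0^0$.
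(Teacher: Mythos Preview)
Your proposal is correct and follows essentially the same route as the paper: apply H\"ormander's Theorem 8.4.8 on boundary values of holomorphic functions with moderate growth, feed in the estimates from Proposition \ref{proploja} and Corollary \ref{fundacoro}, and then compute the dual cone $\Gamma_{x_0}^\circ$ fiberwise over the three strata $Q(x_0)\neq 0$, $Q(x_0)=0,\,x_0\neq 0$, and $x_0=0$. The paper makes the shrinking/exhaustion argument (choosing an increasing family $V_{\delta,x_0}\nearrow\Gamma_{x_0}$ with corresponding $U_{\delta,x_0}\searrow\{x_0\}$ and letting $\delta\to 0$) slightly more explicit than you do, but your invocation of the tuboid formulation amounts to the same thing; your additional remark about checking the sign convention via a one-dimensional model is a useful sanity check that the paper omits.
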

\begin{proof}
We want to apply Theorem 8.7.5 in \cite{Hormander} 
in order to obtain the result explained in \cite{Hormander} on p.~ 322.
More precisely, we want to apply Theorem 8.4.8 of \cite{Hormander} 
which gives the wave front set of boundary values of holomorphic functions. 
Application of Theorem 8.4.8 
of \cite{Hormander} claims that for each point $x_0$ such that $Q(x_0)=0$, 
$$WF(\log Q(U_{x_0}+i0V_{x_0}))\subset U_{x_0}\times V^\circ_{x_0}$$ 
where $V^\circ_{x_0}=\{\eta |\forall y\in V_{x_0}, \eta(y)\geqslant 0\}$ is the dual cone of $V_{x_0}$. 
But since this upper bound is true for \textbf{any} closed subcone $V_{x_0}\subset \Gamma_{x_0}$ and corresponding neighborhood $U_{x_0}$ containing $x_0$, by picking an \emph{increasing} family $V_{\delta,x_0}=\{y| q(x_0,y)\geqslant 2\delta\vert y\vert \}$ and the corresponding \emph{decreasing} family of neighborhoods $U_{x_0,\delta}=\{x|\forall y\in V_{\delta,x_0} ,\vert q(x,y)\vert \geqslant  \delta\vert y\vert, \vert x-x_0 \vert\leqslant \delta \}$, when $\delta \rightarrow 0$, we find that the wave front set of the boundary value over each point $x_0$ should be contained in the \textbf{dual cone} $\Gamma_{x_0}^\circ=\{\eta | \forall y\in \Gamma_{x_0},\eta(y)\geqslant 0 \}$ of $\Gamma_{x_0}$. Our job consists in determining this \textbf{dual cone} $\Gamma_{x_0}^\circ$ for all $x_0$  such that $Q(x_0)=0$ ie in the singular support of $Q^s(.+i0\theta)$.
As usual there are two cases: $Q(x_0)=0,x_0\neq 0$ and $x_0=0$. 

 For $Q(x_0)=0,x_0\neq 0$, consider the cone 
\begin{equation} 
\{y |q(x_0,y)\neq 0 \} 
\end{equation} 
this cone contains $\emph{two connected components}$ separated by the hyperplane $H=\{y |q(x_0,y)=0 \}$, we should set $\Gamma_{x_0}$ equal to the connected component which contains $\theta$, $$\Gamma_{x_0}=\{y |  q(x_0,y)q(x_0,\theta)>0 \}.$$ However, since $q(x_{0},\theta)=x_0^0$ and $dQ_{x_0}(y)=q(x_0,y)$, it is much more convenient to reformulate $\Gamma_{x_0}$ as the half space
\begin{equation}
\Gamma_{x_0}=\{y |\eta(y)>0 \}, \eta=x^0_{0}dQ_{x_0}
\end{equation}
for the linear form $y\mapsto \eta=x^0_{0}dQ_{x_0}(y)$.
By definition, this half space is the convex enveloppe of the linear form $\eta$ thus
the dual cone $\Gamma^\circ_{x_{0}}$ of the half space $\Gamma_{x_0}$ consists in 
the positive scalar multiples of the linear form $\eta$ generating this half space, 
finally $\Gamma^\circ_{x_{0}}=\{\tau dQ_{x_0} |  \tau x_0^0>0 \}$.

 When $x_{0}=0$, consider the cone 
\begin{equation} 
\{y |q(y,y)\neq 0 \} 
\end{equation} 
this cone contains $\emph{three connected components}$ depending on the sign of $Q$ and $y^0$, we should set $\Gamma_0$ equals to the connected component which contains $\theta$:
\begin{equation} 
\Gamma_0=\{y |q(y,y)>0,y^0>0 \}.
\end{equation} 
By a straightforward calculation 
$$\Gamma_0^\circ=\{\eta |\forall y\in\Gamma_0,\eta(y)\geqslant 0 \}=\{\eta | Q(\eta)\geqslant 0, \eta^0\geqslant 0 \}, $$
which is the future cone in dual space. 
Finally, $$WF\log Q(.+i0\theta)\subset \left(\underset{x_{0}\neq 0,Q(x_0)=0}{\bigcup} \Gamma_{x_{0}}^\circ\right)\bigcup \Gamma_0^\circ $$
and we have the same upper bound for $WF Q^s(.+i0\theta)$.
\end{proof}
The proof of this theorem cannot be found in physics textbooks and is not even sketched in \cite{Hormander} (where it is only stated 
as an example of direct application of Theorem 8.7.5 in \cite{Hormander}). 
A nice consequence of theorems proved in this section is that it makes sense of \textbf{complex powers} of the Wightman function $\Delta_+$.
Our work differs from the work of Marcel Riesz 
because the Riesz family $\square^s$ 
does not have the right wave front set, 
for all $s$ $\square^s\neq\Delta_+^s$,
actually $\square^{-1}$ 
is a \textbf{fundamental solution} 
of the wave equation whereas the 
Wightman function $\Delta_+$ is an actual 
\textbf{solution} of the wave equation.
  
\section{Pull-backs and the exponential map.}
\paragraph{The moving frame.}\label{movingframe}
Let $(M,g)$ be a pseudo-Riemannian
manifold and $TM$ its tangent bundle.
We denote by $(p;v)$ an element of $TM$, 
where $p\in M$ and $v\in T_pM$.
Let $\mathcal{N}$ be 
a neighborhood
of the zero section
$\underline{0}$ in $TM$
for which 
the 
map
$(p;v)\in\mathcal{N}\mapsto (p,\exp_p(v))\in M^2 $
is a local
diffeomorphism 
onto its image
($\exp_p:T_pM\mapsto M$ is the exponential geodesic map).
Thus the subset 
$\mathcal{V}=\exp\mathcal{N}\subset M^2$
is a neighborhood
of $d_2$ and
the inverse map
$(p_1,p_2)\in \mathcal{V}\mapsto (p_1;\exp_{p_1}^{-1}(p_2))\in\mathcal{N}$
is a well defined diffeomorphism.
Let
$\Omega$ be an open subset of $M$
and
$(e_0,...,e_n)$ be \textbf{an orthonormal moving frame}
on $\Omega$ 
($\forall p\in \Omega , g_p(e_\mu(p),e_\nu(p))=\eta_{\mu\nu}$), 
and $(\alpha^\mu)_\mu$ 
the corresponding 
orthonormal moving coframe.
\paragraph{The pull-back.}
We denote by
$\epsilon_\mu$ the
canonical basis of
$\mathbb{R}^{n+1}$,
then
the data of the orthonormal moving coframe $(\alpha^\mu)_\mu$
allows to define the submersion
\begin{equation}\label{applipullback}
F:=(p_1,p_2)\in \mathcal{V}\mapsto 
F^\mu(p_1,p_2)\epsilon_\mu=
\underset{\in T_{p_1}^\star M}{\underbrace{\alpha^\mu_{p_1}}}\underset{\in T_{p_1}M}{\underbrace{(\exp_{p_1}^{-1}(p_2))}}\epsilon_\mu\in\mathbb{R}^{n+1}. 
\end{equation}
For any distribution $f$ in
$\mathcal{D}^\prime(\mathbb{R}^{n+1})$, 
the composition 
$$(p_1,p_2)\in \mathcal{V}\mapsto f\circ F(p_1,p_2)=f\circ\left(\alpha^\mu_{p_1}(\exp_{p_1}^{-1}(p_2)) \epsilon_\mu\right)$$
defines the pull-back of $f$ on $\mathcal{V}\subset M^2$.
If $f$ is $O(n,1)_+^\uparrow$ invariant, 
then
the pull-back defined as above
\textbf{does not depend on the choice
of orthonormal moving frame} $(e_\mu)_\mu$ 
and is thus \textbf{intrinsic}
(since all orthonormal moving frames 
are related by gauge transformations 
in $C^\infty(M,O(n,1)_+^\uparrow)$).
We apply this construction to the family 
$Q^{s}(h+i0\theta)\in \mathcal{D}^\prime\left(\mathbb{R}^{n+1}\right)$ 
constructed in Corollary (\ref{fundacoro}) 
as boundary value of holomorphic functions,
and we obtain 
the distribution
$(p_1,p_2)\in\mathcal{V}\mapsto Q^s\circ  \left(\alpha^\mu_{p_1}(\exp_{p_1}^{-1}(p_2)) \epsilon_\mu\right)$.
This 
allows to \textbf{canonically}
pull-back
$O(n,1)_+^\uparrow$ invariant 
distributions to distributions
defined on a neighborhood
of $d_2$.
\begin{ex}\label{Gammaex}
The quadratic function $Q(h)=h^\mu\eta_{\mu\nu} h^\nu$ is $O(n,1)_+^\uparrow$ invariant
in $\mathbb{R}^{n+1}$.
The pull back of $Q$ by $F$ on $\mathcal{V}$ gives 
$$Q\circ F(p_1,p_2)= \alpha^\mu_{p_1}(\exp_{p_1}^{-1}(p_2))\eta_{\mu\nu} \alpha^\nu_{p_1}(\exp_{p_1}^{-1}(p_2))$$
which is the ``square of the pseudodistance'' between
the two points $(p_1,p_2)$ called 
Synge's world function in the physics literature. 
Following \cite{Hadamard}, 
we will denote this function  
by $\Gamma(p_1,p_2)$.
\end{ex}
\subsection{The wave front set of the pull-back.}
We compute the 
wave front set
of $Q^s\circ F$.
\paragraph{The expression 
of $WF(Q^s(.+i0\theta))$ in terms
of $\eta_{\mu\nu}$.}
Notice that $WFQ^s(.+i0\theta)$ can be written in the form:
\begin{equation}\label{WFindices}
WFQ^s(.+i0\theta)=\{ (h^\mu; \lambda\eta_{\mu\nu}h^\nu )| Q(h)=0,h^0\lambda>0  \}\cup \{ (0;k)| Q(k)\geqslant 0,k_0>0 \},
\end{equation}
where 
the condition $h^0\lambda>0$ 
plays an important role in
ensuring that the momentum  
$\lambda\eta_{\mu\nu}h^\nu$ 
has \textbf{positive energy}.
\paragraph{The pull-back theorem of H\"ormander in our case.}
Denote by $t$ the distribution
$Q^s(.+i0\theta)$.
An application of the 
pull-back 
theorem (\cite{Hormander} Theorem 8.2.4) in 
our situation
gives
\begin{equation}\label{pulledbackWF}
WF(F^{\star }t)\subset \{(p_1,p_2; k\circ d_{p_1}F,k\circ d_{p_2}F) | (F(p_1,p_2),k)\in WF(t)\} 
\end{equation}
We denote by $(p_1,p_2;\eta_1,\eta_2)$ an element
of $T^\star \mathcal{V}\subset T^\star M^2$ and
$(h^\mu;k_\mu) $ the coordinates in $T^\star \mathbb{R}^{n+1}$.
The pull-back with 
indices reads:
$$(p_1,p_2; k\circ d_{p_1}F,k\circ d_{p_2}F)=(p_1,p_2; k_\mu d_{p_1}F^\mu,k_\mu d_{p_2}F^\mu).$$
\paragraph{Step 1, we first 
compute
$WF(F^\star t)$
outside
the set
$d_2=\{p_1=p_2\}$.}
The condition 
$(F(p_1,p_2),k)\in WF(t)$ in (\ref{pulledbackWF}), 
reads by (\ref{WFindices})
$(F^\mu(p_1,p_2);k_\mu)=(F^\mu(p_1,p_2);\lambda\eta_{\mu\nu}F^\nu(p_1,p_2))$.
We obtain $$(p_1,p_2; \lambda k\circ d_{p_1}F,\lambda k\circ d_{p_2}F)=(p_1,p_2;\lambda F^\mu\eta_{\mu\nu_2}d_{p_1}F^{\nu_2},\lambda F^\mu\eta_{\mu\nu_2} d_{p_2}F^{\nu_2})$$
and also 
$F^\mu(p_1,p_2)\eta_{\mu\nu}F^\nu(p_1,p_2)=0$.
Now set
$\Gamma(p_1,p_2)=F^\mu(p_1,p_2)\eta_{\mu\nu}F^\nu(p_1,p_2)$.
The key
observation is that
$d_{p_1}\Gamma=2F^\mu\eta_{\mu\nu}d_{p_1}F^\nu$ and $d_{p_2}\Gamma=
2F^\mu\eta_{\mu\nu}d_{p_2}F^\nu$, 
hence:
$$WF(F^{\star }t)\subset \{(p_1,p_2; \lambda d_{p_1}\Gamma,\lambda d_{p_2}\Gamma) | \Gamma(p_1,p_2)=0, \lambda F^{0}(p_1,p_2)>0 ,\lambda\in\mathbb{R}  \}$$
$$\cup  \{(p_1,p_2; k\circ d_{p_1}F,k\circ d_{p_2}F) | p_1=p_2, Q(k)\geqslant 0,k_0>0 \}.$$
\paragraph{The geometric interpretation of the last formula.}
\begin{defi}
A distribution $t\in \mathcal{D}^\prime\left(M^2\right)$
satisfies the Hadamard condition,
if and only if
$WF(t)\subset \{(p_1,p_2;-\eta_1,\eta_2) | (x_1;\eta_1)\sim (x_2;\eta_2),\eta_2^0>0 \}$.
\end{defi}
Our convention for the 
Hadamard
condition is the opposite
of the convention of Theorem
3.9 p.~33 in \cite{Junker}.
The Hadamard condition
is a condition on the wave front set
of a distributional bisolution
of the wave equation
which ensures it represents
a quasi free state 
of the free quantum field
theory in curved space time
(\cite{Junker}).

The function $\Gamma$ is the pseudo Riemannian analogue 
of the square geodesic distance and will be discussed 
in paragraph (\ref{Gammapart}).
We first interpret the term 
$$\{(p_1,p_2; \lambda d_{p_1}\Gamma,\lambda d_{p_2}\Gamma) | \Gamma(p_1,p_2)=0, \lambda F^{0}(p_1,p_2)>0   \}$$ 
appearing in the last formula
as the subset
of all elements
in $T^\star \mathcal{V}$
of the conormal bundle
of the conoid $\{\Gamma=0\}$ 
such 
that 
$(\eta_2)_0$ has
\textbf{constant} sign:
this is exactly the 
\textbf{Hadamard condition}.
If we use the metric 
to lift the indices,
$d_{p_1}\Gamma\left(e_\mu(p_1)\right)\eta^{\mu\nu}e_\nu(p_1)$
and $d_{p_2}\Gamma\left(e_\mu(p_2)\right)\eta^{\mu\nu}e_\nu(p_2)$
are the Euler vector fields $\nabla_1\Gamma,\nabla_2\Gamma$ defined by Hadamard. 
We will later prove in proposition (\ref{Hadamardsimpleform}) 
that the vectors 
$\nabla_1\Gamma,-\nabla_2\Gamma$ 
are parallel along the null geodesic
connecting $p_1$ and $p_2$, proving
$(d_{p_1}\Gamma,-d_{p_2}\Gamma)$ are in fact \textbf{coparallel}
along this null geodesic.
\paragraph{Step 2, ``Diagonal''.}
For any function
$F$ on $M^2$, we
uniquely decompose the total differential
in two
parts
as follows
$$dF=d_{p_1}F+d_{p_2}F,\text{ where }d_{p_1}F|_{\{0\}\times T_{p_2}M}=0,d_{p_2}F|_{T_{p_1}M\times\{0\}}=0.$$
Let $i$ be the inclusion
map $i:=p\in M\mapsto (p,p)\in d_2\subset M$
then 
$\forall p\in M, F\circ i(p)=0\implies 
d_p F\circ i=0 \Leftrightarrow d_{p_1}F\circ di+d_{p_2}F\circ di=0$.
Since
$$d_{p_2}F^\mu(p,p)=d_{p_2}\alpha^\mu_{p_1}\left(\exp_{p_1}^{-1}(p_2)\right)|_{p_1=p_2=p}=
\alpha^\mu_{p_1}\left(d_{p_2}\exp_{p_1}^{-1}(p_2)\right)|_{p_1=p_2=p}=\alpha^{\mu}(p),$$ because $d_{p_2}\exp^{-1}_{p_1}(p_2)|_{p_1=p_2=p}=Id_{T_pM\mapsto T_pM}=e_{\mu}(p)\alpha^{\mu}(p)$.
Thus
$d_{p_1}F^\mu(p,p)=-\alpha^\mu(p)$ and
$$ \{(p_1,p_2; k\circ d_{p_1}F,k\circ d_{p_2}F) | p_1=p_2, Q(k)\geqslant 0,k_0>0 \}$$
$$=\{(p,p; -k_\mu \alpha^\mu(p),k_\mu \alpha^\mu(p)) | p\in M, Q(k)\geqslant 0,k_0>0\}.$$
Then summarizing
step 1 and step 2, 
let us denote by $\Lambda\subset T^\bullet \left(M^2\setminus d_2\right)$ the 
conormal bundle of the set $\{\Gamma=0\}$ with the zero section removed:
\begin{thm}\label{Wavefrontpullback}
The wave front set 
of the distributions 
$Q^s(\cdot+i0\theta )\circ F$ and 
$\log Q(\cdot+i0\theta) \circ F$
is contained in
\begin{equation}
\left(\Lambda\bigcup \{(p,p;-\eta,\eta)|g_p(\eta,\eta)\geqslant 0\}\right) \bigcap \{(p_1,p_2;\eta_1,\eta_2)| \eta_2^0>0\},
\end{equation}
where $\Lambda\subset T^\bullet \left(M^2\setminus d_2\right)$ is the conormal of $\{\Gamma=0\}$ with the zero section removed.
\end{thm}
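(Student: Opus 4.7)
The plan is to apply H\"ormander's pull-back theorem (Theorem 8.2.4 in \cite{Hormander}) to $F$ viewed as a submersion $\mathcal{V} \to \mathbb{R}^{n+1}$, and then analyze the resulting set in two regimes: off the diagonal $d_2 = \{p_1 = p_2\}$, where $F(p_1,p_2) \neq 0$, and on $d_2$, where $F(p,p) = 0$. Since $F$ is a local diffeomorphism of $\mathcal{V}$ onto a neighborhood of the zero section, these two regimes correspond precisely to the two pieces of $WF(Q^s(\cdot + i0\theta))$ given in formula (\ref{WFindices}): the conormal to $\{Q = 0\}\setminus\{0\}$ with positive energy, and the fiber over the origin.

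For the off-diagonal regime, I would start with the formula
$$WF(F^\star t) \subset \{(p_1,p_2;\, k\circ d_{p_1}F,\, k\circ d_{p_2}F) \mid (F(p_1,p_2); k) \in WF(t)\},$$
substitute $k_\mu = \lambda \eta_{\mu\nu} F^\nu(p_1,p_2)$ with $\lambda F^0(p_1,p_2) > 0$ and $\Gamma(p_1,p_2) = 0$, and use the key algebraic identity $d_{p_i}\Gamma = 2 F^\mu \eta_{\mu\nu} d_{p_i} F^\nu$ (where $\Gamma = F^\mu \eta_{\mu\nu} F^\nu$ is the Synge world function of Example \ref{Gammaex}). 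This identifies the pulled-back covectors as $(\lambda d_{p_1}\Gamma, \lambda d_{p_2}\Gamma)/2$, which lies in the conormal bundle $\Lambda$ of the zero set $\{\Gamma = 0\}$, since $d\Gamma$ vanishes only on the diagonal (one would check this using that $d\Gamma|_{d_2} = 0$ and that $\Gamma$ has a non-degenerate Hessian transverse to $d_2$).

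For the diagonal regime, I would compute the differentials $d_{p_i}F$ at a point $(p,p)$ explicitly. Using the moving coframe $(\alpha^\mu)$ and the fact that $d_{p_2} \exp_{p_1}^{-1}(p_2)|_{p_1 = p_2 = p} = \mathrm{Id}_{T_pM}$, one gets $d_{p_2}F^\mu(p,p) = \alpha^\mu(p)$ and, by differentiating $F(p,p) = 0$, $d_{p_1}F^\mu(p,p) = -\alpha^\mu(p)$. The corresponding fiber of $WF(F^\star t)$ over $(p,p)$ then becomes $\{(-k_\mu \alpha^\mu(p), k_\mu \alpha^\mu(p)) \mid Q(k) \geqslant 0,\, k_0 > 0\}$, which is exactly $\{(-\eta,\eta) \mid g_p(\eta,\eta) \geqslant 0,\, \eta^0 > 0\}$ because $(\alpha^\mu)$ is orthonormal.

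The main subtlety — and the step I expect to require the most care — is tracking the positive-energy condition $\eta_2^0 > 0$ uniformly across both regimes, and verifying that the sign condition $\lambda F^0(p_1,p_2) > 0$ in the off-diagonal piece matches the sign convention coming from $(\eta_2)_0 > 0$ under the map $k \mapsto (k \circ d_{p_1}F,\, k \circ d_{p_2}F)$. This requires being attentive to the choice of the frame $e_0$ being future-directed timelike and to the behaviour of $F^0$ along null geodesics (this is where Proposition \ref{Hadamardsimpleform}, to be proved later, enters implicitly). Once these sign conventions are aligned, the union of the two pieces is precisely the set stated in the theorem, and the same argument works verbatim for $\log Q(\cdot + i0\theta)$ since it has the same wave front set as $Q^s(\cdot + i0\theta)$.
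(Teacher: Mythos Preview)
Your proposal is correct and follows essentially the same route as the paper: apply H\"ormander's pull-back theorem, split into the off-diagonal and diagonal regimes according to whether $F(p_1,p_2)\neq 0$ or $=0$, use the identity $d_{p_i}\Gamma = 2F^\mu\eta_{\mu\nu}\,d_{p_i}F^\nu$ to land in the conormal $\Lambda$ off the diagonal, and compute $d_{p_2}F^\mu(p,p)=\alpha^\mu(p)$, $d_{p_1}F^\mu(p,p)=-\alpha^\mu(p)$ on the diagonal via $F\circ i=0$. The paper treats the positive-energy sign condition somewhat more briefly than you suggest (it simply records $\lambda F^0(p_1,p_2)>0$ and interprets it as the Hadamard condition), so your caution about aligning $\lambda F^0>0$ with $(\eta_2)_0>0$ is well placed but not elaborated further in the paper itself.
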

Remarks:\\
a)If we denote by $\overline{\Lambda}$ the closure
of the conormal $\Lambda\subset T^\bullet \left(M^2\setminus d_2\right)$ 
in $T^\bullet M^2$,
then $\left(\Lambda\bigcup \{(p,p;-\eta,\eta)|g_p(\eta,\eta)\geqslant 0\}\right)=\overline{\Lambda}+\overline{\Lambda}$.

b) $\{(p,p;-\eta,\eta)|g_p(\eta,\eta)\geqslant 0\}$ is contained in the conormal $(Td_2)^\perp$
of $d_2$.

\begin{coro}
The families $Q^s(.+i0\theta)\circ F$ and $\log Q(.+i0\theta)\circ F$ 
satisfy the 
\textbf{Hadamard condition}.
\end{coro}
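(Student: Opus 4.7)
The plan is to deduce this corollary directly from Theorem \ref{Wavefrontpullback} combined with the geometric proposition (\ref{Hadamardsimpleform}) announced earlier in the chapter, which asserts that $\nabla_1\Gamma$ and $-\nabla_2\Gamma$ are parallel along the null geodesic joining $p_1$ and $p_2$ whenever $\Gamma(p_1,p_2)=0$. The whole argument is really an unpacking of the inclusion
\[
WF\bigl(Q^s(\cdot+i0\theta)\circ F\bigr)\subset \Bigl(\Lambda\cup\{(p,p;-\eta,\eta)\mid g_p(\eta,\eta)\geqslant 0\}\Bigr)\cap\{\eta_2^0>0\}
\]
into the Radzikowski description of the Hadamard set.

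I would split the verification according to whether a wave-front element $(p_1,p_2;\eta_1,\eta_2)$ lies off the diagonal $d_2$ or on it. In the off-diagonal case, the theorem says $\eta_1=\lambda d_{p_1}\Gamma$ and $\eta_2=\lambda d_{p_2}\Gamma$ for some real $\lambda$, with $\Gamma(p_1,p_2)=0$ and a sign condition inherited from $\lambda F^0(p_1,p_2)>0$; invoking (\ref{Hadamardsimpleform}), the covectors $-d_{p_1}\Gamma$ and $d_{p_2}\Gamma$ are then coparallel along the null geodesic connecting $p_1$ and $p_2$. Setting $\eta_1':=-\lambda d_{p_1}\Gamma$ and $\eta_2':=\lambda d_{p_2}\Gamma$, this exactly realises the relation $(p_1;\eta_1')\sim(p_2;\eta_2')$ of Radzikowski, so $(p_1,p_2;\eta_1,\eta_2)=(p_1,p_2;-\eta_1',\eta_2')$ has the Hadamard form. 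In the diagonal case $p_1=p_2=p$, the theorem yields $(p,p;-\eta,\eta)$ with $\eta$ causal, and this is consistent with the (degenerate) Hadamard relation where the null geodesic collapses to a point and $\eta$ parallel-transports to itself.

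The remaining point is the positive-energy half-space condition $\eta_2^0>0$, which is built into Theorem \ref{Wavefrontpullback} through the positivity of $\eta^0$ carried along from $WF(Q^s(\cdot+i0\theta))$ by the chain rule for $F$; since $F$ is built from an orthonormal moving coframe, the zeroth component with respect to this frame transfers correctly from $T^\star\mathbb{R}^{n+1}$ to $T^\star M$ and can be read as the condition that $\eta_2$ lies in the future half of the light cone at $p_2$. The only slightly delicate issue is to make sure the sign conventions match between the three sources: the sign $\lambda F^0>0$ coming from the pull-back, the explicit minus sign in the Hadamard condition $(p_1,p_2;-\eta_1,\eta_2)$, and the minus sign in $-\nabla_2\Gamma$ being parallel to $\nabla_1\Gamma$.

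I expect that the main obstacle is exactly this bookkeeping of orientations and the proof of the coparallelism statement (\ref{Hadamardsimpleform}), which is the real geometric content and which will have to be established in the sequel; granted that, the present corollary is a formal reading off of the theorem. The treatment for $\log Q(\cdot+i0\theta)\circ F$ is word-for-word identical, since Theorem \ref{Wavefrontpullback} gives the same wave-front bound for both families.
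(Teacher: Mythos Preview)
Your proposal is correct and follows essentially the same route as the paper. The paper in fact embeds the argument for this corollary into the geometric discussion immediately preceding Theorem~\ref{Wavefrontpullback}: it identifies the off-diagonal part $\Lambda$ with the Hadamard set via the coparallelism of $(d_{p_1}\Gamma,-d_{p_2}\Gamma)$ along the null geodesic (deferred to Proposition~\ref{Hadamardsimpleform}), and reads the diagonal part and the energy-positivity directly from the theorem, exactly as you do; the corollary is then stated without further proof.
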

\paragraph{Discussion of the sign convention for the energy.}
We want to discuss some sign conventions.
Recall that if $(h;k)\in WF(Q(.+i0\theta)^s)$ (resp $WF(Q(.-i0\theta)^s)$) then $k$ has positive (resp negative) energy.
Denote $(p_1,p_2;\eta_1,\eta_2)$ an element of the wave front set of $F^\star Q^s(\cdot\pm i0\theta)$.
If we want $\eta$ to be a covector of \textbf{positive energy} (resp negative energy), then we must consider the distribution 
$F^{\star}Q^s(.+i0\theta)$ (resp $F^{\star}Q^s(.-i0\theta)$).
 
 Notice that in the physics literature, the boundary value is determined using a Cauchy hypersurface determined by a function 
$T:M\mapsto \mathbb{R}$: 
$$\left(\Gamma(p_1,p_2)+i\varepsilon (T(p_1)-T(p_2))+\varepsilon^2\right)^s.$$ 
The proof that it defines 
a well defined distribution is never given 
and the wave front set of this boundary value was never computed. 
Furthermore, 
the formula is not obviously covariant 
since it relies on the existence of 
a foliation of space-times by Cauchy hypersurfaces. 
\subsection{The pull back of the phase function.} 
In order to connect with the interpretation
of the wave front set in terms of Lagrangian
manifold, we 
imitate 
what we did for $((x^0\pm i0)^2-\sum_{i=1}^n (x^i)^2)^{-1}$, 
we pull-back the oscillatory integral representation on $\mathcal{V}\subset M^2$ by the smooth map $F$. 
\begin{thm} 
The distribution $F^{*}\left(Q(.+i0\theta) \right)^{-1}$ 
is the Lagrangian distribution given by the formula
$$C_n\int_{\mathbb{R}^n} d^n\xi e^{i\left(\phi_{\pm}\circ F\right)(p_1,p_2;\xi) }\frac{1}{|\xi|},$$ 
this Lagrangian distribution 
with phase function $\phi_{\pm}\circ F$ 
has a wave front set which satisfies the Hadamard condition.
\end{thm}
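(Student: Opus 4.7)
My plan is to identify $F^\star Q^{-1}(\cdot + i0\theta)$ with the oscillatory integral written in the statement, and then to extract its wave front set from the general theory of Lagrangian distributions applied to the pulled-back phase $\phi_\pm \circ F$.

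First I would use the identity from Proposition \ref{oscilldefi}, which represents $Q^{-1}(\cdot + i0\theta)$ on $\mathbb{R}^{n+1}$ as the oscillatory integral $C_n \int_{\mathbb{R}^n} d^n\xi \, e^{i\phi_\pm(t,\vec x;\xi)} |\xi|^{-1}$ with phase function $\phi_\pm(t,\vec x;\xi) = -\vec x\cdot\vec\xi \pm t|\vec\xi|$. Because $F: \mathcal{V}\subset M^2 \to \mathbb{R}^{n+1}$ is a local diffeomorphism onto its image in $\mathcal{N}$ (it was defined via the exponential map and a moving frame in paragraph \ref{movingframe}), the pull-back theorem of H\"ormander (8.2.4) applies to $Q^{-1}(\cdot + i0\theta)$ --- the non-intersection condition of the pull-back theorem is automatic since $F$ is a submersion --- and $F^\star Q^{-1}(\cdot + i0\theta)$ is again given by the same integral with phase $\phi_\pm \circ F(p_1,p_2;\vec\xi)$ and amplitude $C_n/|\vec\xi|$.

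Next I would verify that $(\mathcal{V}\times (\mathbb{R}^n\setminus\{0\}),\phi_\pm \circ F)$ is a Morse family, so that the object is genuinely a Lagrangian distribution. Homogeneity of degree $1$ in $\vec\xi$ is inherited from $\phi_\pm$. Non-degeneracy of $d_{(p_1,p_2,\vec\xi)}(\phi_\pm\circ F)$ away from $\vec\xi = 0$ follows from the fact that $F$ is a submersion together with the already-checked non-degeneracy of $d\phi_\pm$ in Proposition \ref{oscilldefi}. Finally one must check that on the critical set $\Sigma_{\phi_\pm\circ F} = \{d_{\vec\xi}(\phi_\pm\circ F) = 0\}$, the map $(p_1,p_2,\vec\xi)\mapsto d_{\vec\xi}(\phi_\pm\circ F)$ has surjective differential: this reduces via the local-diffeomorphism $F$ to the corresponding statement for $\phi_\pm$, which was implicit in Proposition \ref{oscilldefi}. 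This gives a smooth Lagrange immersion $\lambda_{\phi_\pm\circ F}\Sigma_{\phi_\pm\circ F} \subset T^\bullet M^2$.

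For the wave front bound, I would compute this Lagrangian directly. The critical condition $d_{\vec\xi}(\phi_\pm\circ F) = 0$ is exactly $F(p_1,p_2)\in\Sigma_{\phi_\pm}$, i.e. $F(p_1,p_2)$ lies on the null cone; correspondingly the momenta are $(k\circ d_{p_1}F, k\circ d_{p_2}F)$ with $k_\mu = \pm \lambda \eta_{\mu\nu}F^\nu/|\vec\xi|\cdot |\vec\xi|$ suitably normalized. These are precisely the covectors $(\lambda d_{p_1}\Gamma,\lambda d_{p_2}\Gamma)$ from the computation in step 1 of Theorem \ref{Wavefrontpullback}, with the sign of $\lambda F^0(p_1,p_2)$ fixed by the choice of $\pm$ in $\phi_\pm$. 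Hence the image of $\lambda_{\phi_\pm\circ F}$ is contained in the conormal of $\{\Gamma=0\}$ with the positive-energy condition on $\eta_2^0$; this is exactly the Hadamard set already identified in Theorem \ref{Wavefrontpullback}.

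The main obstacle I anticipate is the behavior near the diagonal $d_2\subset M^2$: on this set $F(p_1,p_2) = 0$ collapses to the origin, where the Lagrange immersion of $\phi_\pm$ is no longer injective and the ``apex'' piece $\{(0;k)\mid Q(k)\geqslant 0,\,k_0>0\}$ of $WF(Q^{-1}(\cdot + i0\theta))$ contributes. Handling it requires showing that the pull-back by $F$ of this apex consists only of covectors $(p,p;-k_\mu\alpha^\mu(p),k_\mu\alpha^\mu(p))$ with $Q(k)\geqslant 0$ and $k_0>0$, exactly as in step 2 of the proof of Theorem \ref{Wavefrontpullback}; this again lies in the Hadamard set because the two momenta are opposite at coincident base points and have positive energy. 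Once this diagonal contribution is absorbed, the combination of the Morse-family description with Theorem \ref{Wavefrontpullback} completes the proof.
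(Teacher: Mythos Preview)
Your approach matches the paper's own sketch: invoke Proposition~\ref{oscilldefi} for the flat oscillatory integral, pull back by $F$, and reduce the wave front computation to the one already carried out in Theorem~\ref{Wavefrontpullback}. One slip to fix: $F:\mathcal{V}\subset M^2\to\mathbb{R}^{n+1}$ is a \emph{submersion}, not a local diffeomorphism (the source has dimension $2(n+1)$), so your phrase ``reduces via the local-diffeomorphism $F$'' is wrong as stated; the Morse-family non-degeneracy and the pull-back still go through precisely because $dF$ is surjective, which you in fact invoke correctly elsewhere.
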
 
\begin{proof}
Let us only sketch the proof. First we use Proposition (\ref{oscilldefi}) to determine
the wave front set of the oscillatory integral $C_n\int_{\mathbb{R}^n} d^n\xi e^{i\phi_{\pm}(h;\xi) }\frac{1}{|\xi|} $. It is the same wave front set as for $((h^0\pm i0)^2-\sum_1^n (h^{i})^2)^{-1}$, then we apply the pull-back theorem of
H\"ormander in order to define the wave front set on the curved space and it exactly follows the
same proof as for the pull back theorem (\ref{Wavefrontpullback}).
\end{proof}
\section{The construction of the parametrix.}
Our parametrix construction is based on the work of Hadamard \cite{Hadamard} (see also \cite{Duistermaathad}).
The construction is done in the neighborhood 
$\mathcal{V}$ of $d_2$. Recall by \ref{applipullback} that $F(p_1,p_2)=e_{p_1}^\mu\left(\exp^{-1}_{p_1}(p_2) \right)\epsilon_\mu$.
\paragraph{The Hadamard expansion.}
We construct the parametrix locally in $\mathcal{V}$ by successive approximations. 
Inspired by the flat case, we look for an expansion of the form $$\Delta_+= U(p_1,p_2) \left(Q^{-1}\circ F \right)(p_1,p_2) $$
$$ +\sum_{k=0}^\infty V_k(p_1,p_2)\Gamma^k(p_1,p_2)\left(\log Q\circ F\right)(p_1,p_2)$$
where $\Gamma(p_1,p_2)=Q\circ F$ 
is the square of the pseudodistance and 
each term of the asymptotic expansion 
has an intrinsic meaning. 
\subsection{The meaning of the asymptotic expansions.}
Our goal is to construct $U,V_k$ in $C^\infty(\mathcal{V})$. 
First, we would like to make an important remark. 
The series 
$\sum_k V_k\Gamma^k$
does not usually converge.
However, 
we can still make sense 
of the asymptotic expansion 
$\sum_k V_k\Gamma^k$ 
as the asymptotic expansion of the \textbf{composite function} $V(.,.;\Gamma)$ in $C^\infty(\mathcal{V}\times\mathbb{R})$
where only the germs of map $r\mapsto V(.,.;r)$
at $r=0$ are defined 
($V$ is not uniquely defined).

\paragraph{The Borel lemma.}
\begin{prop}
For any sequence of smooth functions $\left(V_k\right)_k$ in $\left(C^\infty(\mathcal{V})\right)^{\mathbb{N}}$, there exists a smooth function $r\mapsto V(.,.;r)$ in $C^\infty(\mathcal{V}\times \mathbb{R})$ such that the coefficients of the Taylor series in the variable $r$ of $V$ is equal to the sequence $V_k$:
\begin{eqnarray} 
V_k(p_1,p_2)=\frac{1}{k!}\frac{\partial^kV}{\partial r^k}(p_1,p_2;0).
\end{eqnarray} 
\end{prop}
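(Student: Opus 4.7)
The plan is to apply the classical Borel summation trick, adapted to carry the parameter $(p_1,p_2) \in \mathcal{V}$. First I would fix an exhaustion $K_1 \subset K_2 \subset \cdots$ of $\mathcal{V}$ by compact subsets with $\bigcup_j K_j = \mathcal{V}$, and pick a cutoff $\chi \in C_c^\infty(\mathbb{R})$ with $\chi \equiv 1$ on $[-1/2, 1/2]$ and $\text{supp } \chi \subset [-1,1]$. The candidate for $V$ is then the Borel-type series
\begin{equation*}
V(p_1,p_2; r) := \sum_{k=0}^{\infty} V_k(p_1,p_2)\, r^k\, \chi(\lambda_k r),
\end{equation*}
where the sequence $\lambda_k \to +\infty$ will be chosen below. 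By construction $\chi(\lambda_k \cdot 0) = 1$, so formally the $k$-th Taylor coefficient in $r$ at $r=0$ of the $k$-th summand is exactly $V_k(p_1,p_2)$, while all summands of index $\ell \neq k$ contribute $0$ to the $k$-th Taylor coefficient (the ones with $\ell > k$ because they vanish to order $\ell$, those with $\ell < k$ because the factor $\chi(\lambda_\ell r)$ is constant equal to $1$ near $r=0$ and the prefactor $r^\ell$ only contributes at its own order). So the identity on Taylor coefficients will hold term-by-term, provided we can legitimately differentiate the series term-by-term.

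The core step is to choose $\lambda_k$ to force smoothness. For each $k$, each compact $K_j \subset \mathcal{V}$, and each pair of multi-indices $(\alpha, m)$ with $|\alpha| + m \leq k-1$, the term $V_k(p_1,p_2)\, r^k\, \chi(\lambda_k r)$ and its $(\alpha, m)$-derivatives are bounded, via the Leibniz rule, by a constant times $\lambda_k^{m - k}\, \|V_k\|_{C^{|\alpha|}(K_j)} \cdot \|\chi\|_{C^m}$, because each $r$-derivative hitting $\chi(\lambda_k r)$ produces a factor $\lambda_k$ but the prefactor $r^k$ together with $\text{supp } \chi(\lambda_k \cdot) \subset [-\lambda_k^{-1}, \lambda_k^{-1}]$ supplies a compensating factor $\lambda_k^{-k}$. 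I would choose $\lambda_k$ increasing fast enough so that this quantity is bounded by $2^{-k}$ for all $j \leq k$ and all $|\alpha|+m \leq k-1$; this is a finite constraint at each $k$ because there are only finitely many such $(j,\alpha,m)$, so $\lambda_k$ exists.

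With this choice, for every compact $L \subset \mathcal{V} \times \mathbb{R}$ and every $N \in \mathbb{N}$, the tail of the series beyond index $N$ converges absolutely in the $C^N$ norm on $L$, and the limit is independent of the choice of $K_j$. Hence $V \in C^\infty(\mathcal{V} \times \mathbb{R})$, one may differentiate term by term, and evaluating at $r = 0$ gives the advertised identity $\tfrac{1}{k!}\partial_r^k V(\cdot,\cdot\,;0) = V_k$. The only minor obstacle is the bookkeeping that shows the Leibniz expansion of $\partial_{p_1,p_2}^\alpha \partial_r^m[V_k\, r^k\, \chi(\lambda_k r)]$ really is controlled by $\lambda_k^{m-k}$ once $m \leq k-1$; this reduces to the elementary observation that each $r$-derivative either lowers the exponent of $r$ by one or produces a factor $\lambda_k$ times a cutoff supported where $|r|\leq \lambda_k^{-1}$, and the resulting accounting favours $\lambda_k^{-1}$ over $\lambda_k$ as soon as $m < k$.
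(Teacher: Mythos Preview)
Your argument is correct and is the standard Borel trick; it follows the same core idea as the paper's proof (a cutoff series $\sum_k V_k\,r^k\,\chi(\lambda_k r)$ with $\lambda_k\to\infty$ chosen to force $C^\infty$ convergence), but differs in two respects worth recording.

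First, the paper globalizes by working on one relatively compact $\Omega\subset\mathcal V$, obtaining a local Borel function there, and then patching such local constructions together by a partition of unity $(\varphi_j)_j$ subordinate to a cover of $d_2$; the final function is $\sum_{j,k}\varphi_j V_k\,\chi(r b_{kj})\,r^k$ with a sequence $(b_{kj})_k$ chosen separately for each $j$. You instead use a single series with one sequence $(\lambda_k)_k$, imposing at step $k$ the constraints coming from all the compacts $K_j$ with $j\le k$ in a fixed exhaustion. Your diagonal trick is slightly more economical (one series, no partition of unity), at the cost of having to track the $C^{|\alpha|}(K_j)$ norms of $V_k$ rather than just $\sup_\Omega|V_k|$.

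Second, you are explicitly controlling mixed $(\partial_{p_1,p_2}^\alpha,\partial_r^m)$-derivatives, whereas the paper's sketch only bounds $\partial_r^\alpha\bigl(a_k\,\chi(r b_k)\,r^k\bigr)$ with $a_k=\sup_\Omega|V_k|$, leaving the joint smoothness in $(p_1,p_2,r)$ implicit. Your bookkeeping is the more complete one here; the paper's estimate as written guarantees smoothness in $r$ with uniform bounds in $(p_1,p_2)$ but not, strictly speaking, $C^\infty$ in all variables simultaneously.
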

\begin{proof}
The proof is an application of the idea of the proof of the Borel lemma 
which states that any sequence $(a_k)_k$ can be realized as the Taylor series of a smooth function at $0$.
The proof we give is due to Malgrange \cite{Malgrange}. 
Let $\Omega\subset \mathcal{V}$ be an open subset with compact closure, 
then $\sup_{\Omega}\vert V_k\vert= a_k<\infty$. 
Let $\chi(r)$ be a cut-off function near $r=0$, $\chi=1$ in a neighborhood of zero and vanishes when $r\geqslant 1$.
We fix any 
sequence $b_k$, s.t. 
$b_k>0$ 
growing sufficiently fast 
such that $\forall k, \sup_{r\in\mathbb{R}^+,\alpha\leqslant k-1}  \vert\partial_r^\alpha a_k\chi(rb_k)r^k\vert\leqslant\frac{1}{2^k} $.
Then $\sum V_k\chi(\frac{r}{b_k})r^k$ 
is a smooth function 
whose Taylor coefficients are the $V_k$. 
%Actually notice $\forall p,\chi(\frac{r}{b_k})r^p\leqslant b_k^p$
%thus if $b_k=\frac{1}{2}(1+a_k)^{-1}$, we find
%$$\forall r, \sum a_k\chi(\frac{r}{b_k})r^k \leqslant \sum a_k\frac{1}{2^{k}}(1+a_k)^{-k}  \leqslant 2+ a_0$$
%and the series converges absolutely. Furthermore 
%$$\forall N,\forall r , \sum_0^\infty U_k\chi(\frac{r}{b_k})r^k =\sum_0^N U_k r^k +R_{N+1}$$ where the remainder $$\vert R_{N+1}\vert\leqslant r^{N+1}\sum_{0}^\infty a_{k+N+1}\chi(\frac{r}{b_{k+N+1}})r^k=r^{N+1}\left(a_{N+1}+\sum_{1}^\infty a_{k+N+1}\chi(\frac{r}{b_{k+N+1}})r^k \right)$$ 
%$$\leqslant r^{N+1}\left(a_{N+1}+\sum_{1}^\infty a_{k+N+1}\chi(\frac{r}{b_{k+N+1}})(\frac{1}{2}(1+a_{k+N+1})^{-1})^k \right)\leqslant r^{N+1}(a_{N+1}+2)  $$
%and we are almost done.
The series $\sum V_k\chi(\frac{r}{b_k})r^k$ 
is bounded and defines 
a smooth function 
only on the set 
$\Omega$.
Let $(\varphi_j)_{j\in J}$ 
be a collection
of compactly supported
functions in $M^2$
such that $\sum_{j=J} \varphi_j=1$ in a 
neighborhood of $d_2$ and vanishes
outside $\mathcal{V}$.
For each $j\in J$,
since $\text{supp }\varphi_j$
is compact 
the previous construction
gives us a sequence 
$(b_{k_j})_{k_j}$.
This gives us a final series 
$U=\sum_{j\in J,k\in\mathbb{N}} \varphi_jV_k \chi(\frac{r}{b_{kj}})r^k  $
which is a smooth function
supported in $\mathcal{V}$
such that
$$V(.,.;\Gamma)= \sum_{j\in J,k\in\mathbb{N}} \varphi_jV_k \chi(\frac{\Gamma}{b_{kj}})\Gamma^k\sim \sum_{k\in\mathbb{N}} V_k\Gamma^k.$$
\end{proof}
This remark cannot be found in any physics textbook. It is given in \cite{Friedlander} Lemma 4.3.2.
Finally, if we know the sequence of coefficients $V_k$, we find a function $V$ such that 
$V(p_1,p_2;r)=\sum V_k(p_1,p_2)r^k$, thus $V(p_1,p_2;\Gamma)$ is a well defined smooth function. 
\subsection{The invariance properties of the Beltrami operator $\square^g$ and of gradient vector fields.}
Let $(M,g)$ be a pseudo Riemannian manifold
and let us define
the Dirichlet energy
$\mathcal{E}\left(u;g\right)$
by the equation: 
\begin{equation} 
\mathcal{E}\left(u;g\right)=\int_M \frac{1}{2}\left\langle\nabla u,\nabla u \right\rangle_{g}d\text{vol}_g.
\end{equation} 
We will follow the
exposition of H\'elein (see \cite{Helein}) 
and
define the Beltrami operator 
$\square^g$ for a general metric $g$
by the first variation of the Dirichlet energy:
\begin{equation} 
\delta \mathcal{E}\left(u,g\right)\left(\varphi\right)=\int_M \left\langle\nabla u,\nabla \varphi \right\rangle_{g}d\text{vol}_g=-\int_M \left(\square^g u\right) \varphi d\text{vol}_g,
\end{equation}
(see \cite{Helein} Equation (1.5) p.~3).

\paragraph{The operator $\square^{g}$.}
Let $\Phi$ be a diffeomorphism of $M$, and 
$$\Phi: (M,\Phi^\star g) \mapsto (M,g) $$
the associated isometry,
then
the Dirichlet energy satisfies
the invariance equation by 
the action of diffeomorphisms:
$\forall\Phi \in Diff(M), \mathcal{E}\left(u;g\right)=\mathcal{E}\left(u\circ \Phi;\Phi^*g\right)$ (see \cite{Helein} p.~18-19 for the proof).
Thus the Beltrami operator $\square^g$ 
obeys the equation
\begin{equation}\label{functorialbox}
\forall\Phi \in \text{Diff}(M), \left(\square^g u\right)\circ \Phi =\square^{\Phi^\star g}\left(u\circ \Phi\right)
\end{equation}
\paragraph{The gradient operator $\nabla^g$.}
We want to prove that gradient vector fields w.r.t.
the metric $g$ also behave in a natural way.
Let $f\in C^\infty(M)$
then
\begin{eqnarray}\label{functorialgrad}
\forall\Phi \in \text{Diff}(M), \forall f\in C^\infty(M), 
\nabla^{\Phi^\star g} \left(f\circ\Phi \right)=\Phi^\star\left(\nabla^g f\right) \\
\left\langle \nabla^{g} f,\nabla^{g} f  \right\rangle_g=\left\langle \nabla^{\Phi^\star g} \left(f\circ\Phi\right),\nabla^{\Phi^\star g} \left(f\circ\Phi\right)\right\rangle_{\Phi^\star g} 
\end{eqnarray}
The first equation is equivalent
to the equation 
$\Phi_\star\left(\nabla^{\Phi^\star g} \left(f\circ\Phi \right)\right)
=\nabla^g f$ (\cite{Lee} p.~92--93).
We use the coordinate convention:
$$\Phi:x^\alpha\in (M,\Phi^\star g) \mapsto \phi^\gamma(x)\in (M,g)$$
We start from the definition:
$$ \nabla^{\Phi^\star g} \left(f\circ\Phi \right) = \left(g^{\gamma\delta}\frac{\partial x^\alpha}{\partial \phi^\gamma}\frac{\partial x^\beta}{\partial \phi^\delta}\right)\circ\Phi \frac{\partial \left(f\circ\Phi\right)}{\partial x^\alpha} \frac{\partial}{\partial x^\beta}$$ $$=\left(g^{\gamma\delta}\frac{\partial x^\alpha}{\partial \phi^\gamma}\frac{\partial x^\beta}{\partial \phi^\delta}\frac{\partial f}{\partial \phi^\mu}\right)\circ\Phi \frac{\partial \phi^\mu}{\partial x^\alpha}\frac{\partial}{\partial x^\beta} =\left(g^{\gamma\delta}\frac{\partial x^\beta}{\partial \phi^\delta}\frac{\partial f}{\partial \phi^\gamma}\right)\circ\Phi\frac{\partial}{\partial x^\beta}  $$
then we push-forward this vector field
$$\Phi_\star\left(\nabla^{\Phi^\star g} \left(f\circ\Phi \right)\right)=\left(g^{\gamma\delta}\frac{\partial x^\beta}{\partial \phi^\delta}\frac{\partial f}{\partial \phi^\gamma}\circ\Phi\right)\circ\Phi^{-1}\frac{\partial \phi^\mu}{\partial x^\beta}\frac{\partial}{\partial \phi^\mu} $$
$$=g^{\gamma\delta}\frac{\partial f}{\partial \phi^\gamma}\frac{\partial}{\partial \phi^\delta}=\nabla^g f  $$
The proof of the second identity 
can be simply deduced from the first one 
and one can also 
look at \cite{Helein} p.~19 for a similar proof.
In the sequel, we write $\nabla$ instead of 
$\nabla^g$ where
it will be obvious
we take the gradient
w.r.t. the intrinsic metric $g$
which 
does not depend on the chart
chosen. 
Recall that
we denote by 
$e_{\mu}$ 
the orthonormal 
moving frame
on $M$.
We define two gradient operators $\nabla_1,\nabla_2$ on $M^2$ as follows:
\begin{eqnarray}
\forall f\in C^\infty(M^2), \nabla_1 f(p_1,p_2)=d_{p_1}f\left(e_{\mu}(p_1)\right)\eta^{\mu\nu}e_{\nu }(p_1)\\
\forall f\in C^\infty(M^2), \nabla_2 f(p_1,p_2)=d_{p_2}f\left(e_{\mu}(p_2)\right)\eta^{\mu\nu}e_{\nu}(p_2).
\end{eqnarray}
\paragraph{The exponential map and lifting on tangent spaces.}  
Let us justify microlocally 
the philosophy of the Hadamard construction 
which consists in treating $Q^{-1}\circ F$ 
and $\log Q\circ F$ as distributions of $p_2$ 
where $p_1$
is viewed as a parameter:
let $f\in\mathcal{D}^\prime(\mathcal{V})$ 
be any distribution in $\mathcal{V}\subset M^2$.
We fix $p_1\in M$, then
we can make sense of
the restriction of $f$,
$f(p_1,.):=p_2\in M\mapsto f(p_1,p_2)$
as a distribution
on $\{p_1\}\times M$
if 
$$\text{Conormal }\left(\{p_1\}\times M\right)\bigcap WF(f)=\emptyset.$$
Let $\pi_{1}$ be the projection $\pi_1:=(p_1,p_2)\in M^2\mapsto p_1\in M$, 
if we have 
$$\forall p_1\in M, \text{Conormal }\left(\{p_1\}\times M\right)\bigcap WF(f)=\emptyset,$$
then for 
any test density $\omega\in\mathcal{D}^{n+1}(M)$,
the map $\pi_{1\star}\left(f\omega\right)$ 
defined by: 
$$\left[\pi_{1\star}\left(\omega f \right)=p_1\mapsto 
\underset{\text{partial integration}}{\underbrace{\int_M \omega(p_2)f(p_1,p_2)}}\right]$$ 
is \textbf{smooth} since 
$WF\left(\pi_{1\star}f\right)=\emptyset$
by Proposition 1.3.4 in \cite{DuistermaatFIO}. 
These conditions are
satisfied in our case
since the wave front set 
of $Q^{-1}\circ F$ and $\log Q\circ F$
are \textbf{transverse} to the conormal of 
$\left(\{p_1\}\times M\right)$ by Theorem
\ref{pulledbackWF}.  
We pull back $f(p_1,.)$ on $\mathbb{R}^{n+1}$ 
by the map $E_{p_1}$ defined as follows:
$$E_{p_1}: (h^\mu)_\mu\in\mathbb{R}^{n+1}\mapsto E_{p_1}(h)=\exp_{p_1}(h^\mu e_\mu(p_1)))\in M.$$
The orthonormal frame $\left(e_\mu(p_1)\right)_{\mu}$ fixes 
the isomorphism beetween $T_{p_1}M$ and $\mathbb{R}^{n+1}$.
\subsection{The function $\Gamma$ and the vectors $\rho_1,\rho_2$.}
In the Hadamard construction, everything is expanded in powers of the function $\Gamma$ which is the ``square of the pseudo Riemannian distance''.
$\Gamma$ is a solution of the nonlinear equation (\ref{nonlinearequationHadamard}).
In the physics literature, the function $\Gamma$ is called Synge world's function but the definition and the key equation (\ref{nonlinearequationHadamard}) satisfied by $\Gamma$ can already by found in Hadamard (see the equation $(32)$ 
in \cite{Hadamard} and the 
Lam\'e Beltrami differential 
parameters for $\Gamma$).
\subsubsection{The function $\Gamma$.}\label{Gammapart}
We already defined 
the function 
$\Gamma(p_1,p_2)=\alpha^\mu_{p_1}(\exp_{p_1}^{-1}(p_2))\eta_{\mu\nu} \alpha^\nu_{p_1}(\exp_{p_1}^{-1}(p_2))$
in example (\ref{Gammaex}). In the following proposition,
we explain which differential equation this function satisfies. 
\begin{prop}\label{gammathm}
Let us define the function $$\Gamma(p_1,p_2)=\left\langle\exp^{-1}_{p_1}(p_2),\exp^{-1}_{p_1}(p_2)\right\rangle_{g_{p_1}}$$ in $\mathcal{V}\subset M^2$. 
Then $\Gamma$ satisfies the equation
\begin{equation}\label{nonlinearequationHadamard}
\forall p_1, \left\langle\nabla_2\Gamma,\nabla_2\Gamma\right\rangle_{g(p_2)}(p_2)=4\Gamma
\end{equation}
\end{prop}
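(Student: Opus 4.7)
The plan is to reduce the identity to a pointwise computation in a normal coordinate chart centered at $p_1$, where $\Gamma$ becomes the flat quadratic form, and then to invoke Gauss's lemma to identify the norm of its gradient. The naturality of the gradient under diffeomorphisms, which was just established, is what allows the reduction.

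First, fix $p_1\in M$ and consider the map $E_{p_1}:h\in\mathbb{R}^{n+1}\mapsto \exp_{p_1}(h^\mu e_\mu(p_1))\in M$, which is a local diffeomorphism from a neighborhood of $0$ onto a neighborhood of $p_1$ in $M$. By the very definition of $\Gamma$ given in Example \ref{Gammaex} and in Proposition \ref{gammathm}, we have
\begin{equation*}
\left(\Gamma(p_1,\cdot)\right)\circ E_{p_1}(h) = \eta_{\mu\nu} h^\mu h^\nu.
\end{equation*}
Denote by $\tilde g = E_{p_1}^\star g$ the pulled back metric on a neighborhood of $0\in\mathbb{R}^{n+1}$. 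By the functoriality of the gradient \eqref{functorialgrad}, it suffices to prove that
\begin{equation*}
\langle \nabla^{\tilde g}(\Gamma\circ E_{p_1}),\nabla^{\tilde g}(\Gamma\circ E_{p_1})\rangle_{\tilde g}(h) = 4\,\eta_{\mu\nu}h^\mu h^\nu
\end{equation*}
for $h$ in a neighborhood of $0$.

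Next, I would invoke the classical Gauss lemma for the exponential map (which remains valid in the pseudo-Riemannian setting whenever $\exp_{p_1}$ is a local diffeomorphism): radial straight lines $t\mapsto t h$ in $\mathbb{R}^{n+1}$ are $\tilde g$-geodesics, and for the radial vector field $X = h^\mu\partial_{h^\mu}$ one has $\tilde g(X,Y)_h = \eta(X,Y)_h$ for every tangent vector $Y$, i.e.
\begin{equation*}
\tilde g_{\mu\nu}(h)\, h^\nu = \eta_{\mu\nu}\, h^\nu.
\end{equation*}
A direct consequence is $\tilde g^{\mu\nu}(h)\eta_{\nu\alpha}h^\alpha = h^\mu$. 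Differentiating $\Gamma\circ E_{p_1}=\eta_{\mu\nu}h^\mu h^\nu$ gives $\partial_\mu(\Gamma\circ E_{p_1}) = 2\eta_{\mu\nu}h^\nu$, hence
\begin{equation*}
\langle \nabla^{\tilde g}(\Gamma\circ E_{p_1}),\nabla^{\tilde g}(\Gamma\circ E_{p_1})\rangle_{\tilde g}
= 4\,\tilde g^{\mu\nu}(h)\,\eta_{\mu\alpha}h^\alpha\,\eta_{\nu\beta}h^\beta
= 4\,\eta_{\nu\beta}h^\nu h^\beta = 4\,\Gamma\circ E_{p_1}(h),
\end{equation*}
using Gauss's lemma twice. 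Transporting this identity back to $M$ via $E_{p_1}$ and \eqref{functorialgrad} (applied to the diffeomorphism $E_{p_1}$ restricted to a $\tilde g$-convex neighborhood) yields the claim.

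The only substantive point is the Gauss lemma in the pseudo-Riemannian setting; since $\exp_{p_1}$ is a local diffeomorphism on the neighborhood $\mathcal{V}$ where we have set up the construction, the standard proof (computing $\partial_t\tilde g(X,Y)$ along a radial geodesic and using the geodesic equation for $X$) goes through without modification, the signature playing no role. All other steps are routine manipulations and the functoriality established earlier in this section.
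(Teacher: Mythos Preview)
Your proof is correct and follows essentially the same route as the paper: pull back to the exponential chart $E_{p_1}$, use the functoriality of the gradient \eqref{functorialgrad}, and apply Gauss's lemma $\tilde g_{\mu\nu}(h)h^\nu=\eta_{\mu\nu}h^\nu$ to reduce the squared gradient norm to $4\eta_{\mu\nu}h^\mu h^\nu$. The paper's computation is the same index manipulation you carry out, and your extra remark that the pseudo-Riemannian Gauss lemma goes through unchanged is a welcome clarification.
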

\begin{proof}
Denote by $E_{p_1}^\star g$ the metric
in the geodesic exponential chart centered
at $p_1$. 
We give a purely pseudo Riemannian geometry proof of the claim. 
Since $\Gamma(p_1,p_2)=\left\langle\exp^{-1}_{p_1}(p_2),\exp^{-1}_{p_1}(p_2)\right\rangle_{g_{p_1}}$, we know that 
$$\forall p_1\in M,\forall h\in\mathbb{R}^{n+1},  E_{p_1}^\star\Gamma(p_1,\cdot)(h)=h^\mu\eta_{\mu\nu}h^\nu .$$
Then by equation (\ref{functorialgrad}), writing 
$\left(E^\star_{p_1}g\right)^{\mu\nu}(h)
=\left(E^\star_{p_1}g\right)^{\mu\nu}$
for shortness:
$$\forall p_1\in M,  \left\langle \nabla\Gamma(p_1,.),\nabla\Gamma(p_1,.)\right\rangle_g = \left\langle \nabla_2^{E_{p_1}^\star g}\left(E_{p_1}^\star\Gamma\right),\nabla_2^{E_{p_1}^\star g}\left(E_{p_1}^\star\Gamma\right)\right\rangle_{E_{p_1}^\star g} $$
$$=\left(E_{p_1}^\star g\right)^{\mu\nu}\partial_{h^\mu}(h^{\mu_1}\eta_{\mu_1\nu_1}h^{\nu_1})\partial_{h^\nu}(h^{\mu_2}\eta_{\mu_2\nu_2}h^{\nu_2})$$ $$=\left(E_{p_1}^\star g\right)^{\mu\nu} 2\delta_{\mu}^{\mu_1} \eta_{\mu_1 \mu_2}h^{\mu_2}2\delta_{\nu}^{\nu_1} \eta_{\nu_1\nu_2}h^{\nu_2} $$ 
$$=4(E_{p_1}^\star g)^{\mu\nu}(h)  (E_{p_1}^\star g)_{\mu \mu_2}h^{\mu_2} (E_{p_1}^\star g)_{\nu\nu_2}h^{\nu_2}$$ 
$$=4(E_{p_1}^\star g)_{\mu_2\nu_2}h^{\mu_2}h^{\nu_2}=4\eta_{\mu_2\nu_2}h^{\mu_2}h^{\nu_2} ,$$
by repeated application of the Gauss lemma: 
$(E_{p_1}^\star g)_{\mu \nu} h^\nu=\eta_{\mu\nu}h^\nu$. 
\end{proof}

\paragraph{The Euler fields defined by Hadamard.}
Once we have defined the geometric function $\Gamma$, 
we can define a pair of scaling vector fields:
\begin{defi}\label{Hadamdef}
Let $(p_1,p_2)\in \mathcal{V}\subset M^2$, we define the pair of vector fields
\begin{eqnarray}
\rho_2=\frac{1}{2}\nabla_1 \Gamma=d_{p_2}\Gamma(e_{\mu}(p_2))\eta^{\mu\nu}e_{\nu}(p_2)\\
\rho_1=\frac{1}{2}\nabla_2 \Gamma=d_{p_1}\Gamma(e_{\mu}(p_1))\eta^{\mu\nu}e_{\nu}(p_1).
\end{eqnarray}
\end{defi}
$\rho_1,\rho_2$ are Euler vector fields in the sense of Chapter $1$ for the diagonal $d_2\subset \mathcal{V}$.
The situation is reminiscent of Morse theory. If we freeze the variable $p_1$, the vector field $\rho_1=\frac{1}{2}\nabla_2 \Gamma$ is the \textbf{gradient} (w.r.t. $p_2$ and metric $g$) of the \textbf{Morse function} $p_2\mapsto \Gamma(p_1,p_2)$ which has a critical point at $p_1=p_2$.
The Hadamard equation (\ref{nonlinearequationHadamard}) takes the simple form
\begin{eqnarray}
\rho_2\Gamma(p_1,p_2)=\rho_1\Gamma(p_1,p_2)=2\Gamma(p_1,p_2)
\end{eqnarray}
thus $\Gamma$ is homogeneous of degree $2$ with respect to the geometric scaling defined by these Euler vector fields.
\paragraph{Useful relations beetween $\Gamma,\rho_2$ and $Q^s\circ F$.} 
Around $p_1$,
the manifold $M$ is locally parametrized by the map 
$E_{p_1}:h\in\mathbb{R}^{n+1}\mapsto \exp_{p_1}(h^\mu e_\mu(p_1))$.
$\rho_2=\nabla_2\Gamma$
is an Euler vector field
in $M$ and
we want to study
its pull-back 
$E^\star_{p_1}\rho_2$
by $E_{p_1}$.
\begin{prop}\label{Hadamardsimpleform} 
We have the identity 
$\forall p_1\in M,   E^\star_{p_1}\rho_2=2h^j\partial_{h^j}$ 
and this identity
is independent of the
choice
of orthonormal
moving frame.
\end{prop}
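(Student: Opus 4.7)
The plan is to pull back the identity $\rho_2 = \nabla^g_2 \Gamma$ by $E_{p_1}$ using the functoriality of gradients, reduce the computation to that of $\nabla^{E^\star_{p_1} g} Q$ where $Q(h) = h^\mu \eta_{\mu\nu} h^\nu$, and then exploit the very same consequence of the Gauss lemma which was used to prove Proposition \ref{gammathm}.

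First I freeze $p_1$ and view $\Gamma(p_1,\cdot)$ as an honest function on a neighbourhood of $p_1$ in $M$, so that $\rho_2(p_1,\cdot) = \nabla^g \Gamma(p_1,\cdot)$ is an ordinary gradient vector field on $M$. Applying the naturality equation \eqref{functorialgrad} to the diffeomorphism $\Phi = E_{p_1}$ and $f = \Gamma(p_1,\cdot)$, I obtain
\begin{equation*}
E^\star_{p_1}\rho_2 \;=\; E^\star_{p_1}\bigl(\nabla^{g}\Gamma(p_1,\cdot)\bigr) \;=\; \nabla^{E^\star_{p_1}g}\bigl(E^\star_{p_1}\Gamma(p_1,\cdot)\bigr) \;=\; \nabla^{E^\star_{p_1}g} Q,
\end{equation*}
since by definition of $\Gamma$ in the exponential chart, $E^\star_{p_1}\Gamma(p_1,\cdot)(h) = h^\mu\eta_{\mu\nu}h^\nu = Q(h)$.

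Next I compute $\nabla^{E^\star_{p_1}g} Q$ in coordinates:
\begin{equation*}
\nabla^{E^\star_{p_1}g}Q \;=\; (E^\star_{p_1}g)^{\nu\mu}(h)\,\partial_{h^\mu}Q\,\partial_{h^\nu} \;=\; 2\,(E^\star_{p_1}g)^{\nu\mu}(h)\,\eta_{\mu\alpha}h^\alpha\,\partial_{h^\nu}.
\end{equation*}
The Gauss lemma, in the form already used in the proof of Proposition \ref{gammathm}, gives $(E^\star_{p_1}g)_{\mu\alpha}(h)\,h^\alpha = \eta_{\mu\alpha}h^\alpha$. Contracting with the inverse metric $(E^\star_{p_1}g)^{\nu\mu}$ yields $h^\nu = (E^\star_{p_1}g)^{\nu\mu}\eta_{\mu\alpha}h^\alpha$, and substituting into the expression above gives $\nabla^{E^\star_{p_1}g}Q = 2h^\nu\partial_{h^\nu}$. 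Combining with the previous step proves $E^\star_{p_1}\rho_2 = 2h^j\partial_{h^j}$.

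Finally, for the invariance under the choice of orthonormal moving frame, I note that a gauge transformation $e_\mu \mapsto L^\nu_{\ \mu} e_\nu$ with $L \in O(n,1)$ induces $E'_{p_1}(h') = E_{p_1}(Lh')$, so the coordinate change on $\mathbb{R}^{n+1}$ is the linear map $h = Lh'$. The radial vector field $h^j\partial_{h^j}$ is invariant under any linear automorphism of $\mathbb{R}^{n+1}$, so $2h^j\partial_{h^j}$ has the same expression in both charts. There is no real obstacle here; the only subtle point is ensuring that the naturality identity \eqref{functorialgrad} applies even though $\Gamma$ depends on the parameter $p_1$, which is legitimate because we freeze $p_1$ before taking the gradient.
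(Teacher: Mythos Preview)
Your proof is correct and follows essentially the same route as the paper: both pull back $\rho_2=\nabla^g_2\Gamma$ via the naturality identity \eqref{functorialgrad}, reduce to computing $\nabla^{E^\star_{p_1}g}Q$ with $Q(h)=h^\mu\eta_{\mu\nu}h^\nu$, and finish with the Gauss lemma $(E^\star_{p_1}g)_{\mu\alpha}h^\alpha=\eta_{\mu\alpha}h^\alpha$. Your version is slightly more explicit in two places --- you spell out the contraction with the inverse metric rather than substituting directly, and you give an argument for frame-independence (via $O(n,1)$-invariance of the radial field) that the paper leaves implicit --- but the argument is the same.
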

\begin{proof}
Denote by $E_{p_1}^\star g$ the metric
in the geodesic exponential chart centered
at $p_1$. 
By naturality (\ref{functorialgrad}), we have 
setting
$\left(E^\star_{p_1}g\right)^{\mu\nu}(h)
=\left(E^\star_{p_1}g\right)^{\mu\nu}$
$$ E^\star_{p_1}\rho_2= E^\star_{p_1}\left(\nabla_2\Gamma\right)=\nabla\left(E^\star_{p_1}\Gamma\right)$$
$$= (E^\star_{p_1}g)^{\mu\nu}\partial_{h^\mu}\left(\eta_{kl}h^kh^l\right)\partial_{h^\nu}
=(E^\star_{p_1}g)^{\mu\nu}\left(\eta_{kl}\delta_\mu^k h^l + \eta_{kl}h^k\delta_\mu^l\right)\partial_{h^\nu}$$
$$=2(E^\star_{p_1}g)^{\mu\nu}\eta_{\mu l}h^l\partial_{h^\nu}
=2(E^\star_{p_1}g)^{\mu\nu}(E^\star_{p_1}g)_{\mu l}h^l\partial_{h^\nu}
=2h^\nu\partial_{h^\nu} $$
by application of the Gauss lemma.  
\end{proof}
This proposition allows us to interpret $\frac{1}{2}\nabla_2\Gamma$ as the vector $\dot{\gamma}(1)$ where $s\mapsto \gamma(s)$ is the unique geodesic with boundary condition $\gamma(0)=p_1,\gamma(1)=p_2$:
in exponential chart, this geodesic is given 
by the simple equation
$t\mapsto \gamma(t)=th^j$
and for all $t$ the vector $\dot{\gamma}(t)=h^j\frac{\partial}{\partial h^j}$ is \textbf{parallel} along this geodesic. 
By symmetry of the whole construction, we can interchange the roles of $p_1$ and $p_2$ and we deduce that $\rho_1\in T_{p_1}M,-\rho_2\in T_{p_2}M$ are 
\textbf{parallel} vectors along $\gamma$ (see the same remark in \cite{Zelditch} p.~18). 
A similar proof can be found in \cite{Duistermaathad} Lemma 8.4.
 
 We denote by $\Gamma^s$ the distribution $F^{*}\left(\left(Q(.+i0\theta) \right)^s\right)$
and observe that $\forall n\in\mathbb{N}$, $\Gamma^n=F^\star Q^n$. 
\begin{prop} 
The relation 
\begin{equation}
\forall s\in\mathbb{R},\forall n\in\mathbb{N},\Gamma^n\Gamma^s=\Gamma^{n+s}
\end{equation}
holds 
in the distributional sense.
\end{prop}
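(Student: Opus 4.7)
The plan is to reduce the identity on the curved manifold to the corresponding identity on flat space $\mathbb{R}^{n+1}$, then transport it via the pull-back by $F$. The key observation is that for $n\in\mathbb{N}$, the function $Q^n$ is a polynomial (hence smooth), so $\Gamma^n = F^*(Q^n) = (F^*Q)^n$ is just a smooth function on $\mathcal{V}$, and multiplication by a smooth function is a continuous operation on distributions with no wave front set constraint.

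First I would establish the flat analog: $Q^n \cdot Q^s(\cdot + i0\theta) = Q^{n+s}(\cdot + i0\theta)$ in $\mathcal{D}'(\mathbb{R}^{n+1})$. For $z = x+iy$ with $y\in C^+$, the identity $Q(z)^n \cdot Q(z)^s = Q(z)^{n+s}$ is a trivial identity of holomorphic functions on the tuboid, using the single branch of $\log$ fixed earlier. By Corollary \ref{fundacoro}, both sides converge in $\mathcal{D}'(\mathbb{R}^{n+1})$ as $y\to 0$ inside $C^+$, the left side to $Q(x)^n \cdot Q^s(x + i0\theta)$ and the right side to $Q^{n+s}(x+i0\theta)$. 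Convergence of the left side uses that the polynomial $Q^n(x+iy)$ converges to the smooth function $Q^n(x)$ uniformly on compacts (together with all derivatives), and that multiplication by a smooth function converging in $\mathcal{E}$ preserves distributional convergence.

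Next I would pull back via $F$. Since $F:\mathcal{V}\to\mathbb{R}^{n+1}$ is a submersion, pull-back of smooth functions is plainly multiplicative, $F^*(Q^n) = (F^*Q)^n = \Gamma^n$, and by Theorem \ref{Wavefrontpullback} the wave front set of $Q^{n+s}(\cdot + i0\theta)$ (which is contained in the wave front set of $Q^s(\cdot + i0\theta)$) is transverse to the kernel of $dF$, so the pull-back $F^*Q^{n+s}(\cdot + i0\theta) = \Gamma^{n+s}$ is well-defined. The pull-back commutes with multiplication by smooth functions: for any $u\in\mathcal{D}'(\mathbb{R}^{n+1})$ whose wave front set permits pull-back, and any $f\in C^\infty(\mathbb{R}^{n+1})$, one has $F^*(fu) = (F^*f)(F^*u)$. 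Applying this to $f=Q^n$ and $u = Q^s(\cdot + i0\theta)$ yields
\begin{equation*}
\Gamma^n\,\Gamma^s = (F^*Q^n)\bigl(F^*Q^s(\cdot+i0\theta)\bigr) = F^*\!\bigl(Q^n \cdot Q^s(\cdot+i0\theta)\bigr) = F^*Q^{n+s}(\cdot+i0\theta) = \Gamma^{n+s},
\end{equation*}
which is the claim.

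The main potential obstacle is the verification that $F^*$ commutes with multiplication by a smooth function in the distributional sense; however this is standard and amounts to applying the definition of pull-back on the dense subspace of smooth distributions and passing to the limit using the sequential continuity of $F^*$ from $\mathcal{D}'_\Xi(\mathbb{R}^{n+1})$ to $\mathcal{D}'_{F^*\Xi}(\mathcal{V})$, where $\Xi$ is the wave front cone of $Q^s(\cdot+i0\theta)$ (the pull-back boundedness results of Chapter 4 could alternatively be invoked). Everything else is bookkeeping in the tuboid formalism already set up in the earlier sections.
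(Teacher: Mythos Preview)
Your proof is correct and follows essentially the same strategy as the paper: reduce to the flat identity $Q^n(h)\,Q^s(h+i0\theta)=Q^{n+s}(h+i0\theta)$ via the boundary-value construction, then transport back. The paper writes the flat step slightly differently, decomposing $Q^n(h)Q^s(h+i\varepsilon\theta)=Q^{n+s}(h+i\varepsilon\theta)+\bigl(Q^n(h)-Q^n(h+i\varepsilon\theta)\bigr)Q^s(h+i\varepsilon\theta)$ and observing that the error term converges weakly to zero as $\varepsilon\to 0$; your version using the exact holomorphic identity $Q(z)^nQ(z)^s=Q(z)^{n+s}$ on the tuboid and then letting $y\to 0$ is equivalent and arguably cleaner. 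The paper is also terser about the pull-back step (it works directly in the exponential chart via $E_{p_1}$ rather than invoking $F^*$ and the commutation with smooth multiplication), but your more explicit treatment of that point is harmless and correct.
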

\begin{proof}
Recall $E_{p_1}^\star\Gamma^s(h)=\left(Q(h+i0\theta) \right)^s$.
For $\varepsilon>0$,
$$Q^n(h) Q^s(h+i\varepsilon\theta) $$ 
$$=Q^{n+s}(h+i\varepsilon\theta) +
\left(\left(Q^n(h)-(Q(h+i\varepsilon\theta))^{n}\right)Q^{s}(h+i\varepsilon\theta) \right)$$ where $\left(\left(Q^n(h)-Q^{n}(h+i\varepsilon\theta)\right)Q^{s}(h+i\varepsilon\theta) \right)$ is an error term which converges weakly to zero when $\varepsilon\rightarrow 0$. Thus we should have $Q^n(h)\left(Q(h+i0\theta) \right)^s=\left(Q(h+i0\theta) \right)^{s+n}$ in the distributional sense.
\end{proof}
\subsection{The main theorem.}
We first prove a lemma which implies that $WF(\Delta_+)$
satisfies the soft landing condition.
\begin{lemm}\label{Xislc}
Let $\Xi$ be the wave front set of $F^{*}\left(\left(Q(\cdot+i0\theta) \right)^s\right)$
then $\Xi$ satisfies the soft landing condition. 
\end{lemm}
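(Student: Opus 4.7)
The plan is to exploit the adapted coordinate chart coming from the map $F$ and to show that, in this chart, the singularity is pulled back from a distribution that depends only on the transverse variables, so that the cotangent component along the diagonal in the wave front set is identically zero.

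More concretely, fix $p_0 \in M$ and work in a neighborhood $\mathcal{V}$ of $(p_0,p_0) \in d_2$. I would introduce the local chart
\[
\Psi: (p_1,p_2) \in \mathcal{V} \longmapsto \bigl(p_1, h\bigr) \in M \times \mathbb{R}^{n+1}, \qquad h := F(p_1,p_2),
\]
so that $d_2 = \{h=0\}$ inside this chart, with transverse coordinates $h = (h^0,\dots,h^n)$. The natural Euler vector field for this chart is $\rho = h^j\partial_{h^j}$. A direct verification (using the definition (\ref{applipullback}) of $F$, and the fact that in this chart $p_2 = \exp_{p_1}(h^\mu e_\mu(p_1))$) shows that, expressed in these coordinates, $F$ simply becomes the second projection $\pi : (p_1,h) \mapsto h$. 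Consequently,
\[
F^\star\bigl(Q(\cdot + i0\theta)\bigr)^s \;=\; \pi^\star\bigl(Q(\cdot + i0\theta)\bigr)^s
\]
on $\mathcal{V}$.

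Next I would compute the wave front set via the pull-back theorem applied to $\pi$. Since $\pi$ is the trivial projection, its cotangent lift sends a covector $\eta \in T^\star\mathbb{R}^{n+1}$ to $(0,\eta) \in T^\star M \oplus T^\star\mathbb{R}^{n+1}$. Writing a point of $T^\bullet(M \times \mathbb{R}^{n+1})$ as $(p_1,h;k,\xi)$ with $k$ dual to $p_1$ and $\xi$ dual to $h$, we get
\[
\Xi \;\subset\; \bigl\{(p_1,h;0,\xi)\; : \; (h;\xi)\in WF\bigl((Q(\cdot+i0\theta))^s\bigr)\bigr\}.
\]
In particular the $k$-component vanishes identically on $\Xi$.

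Finally, plugging this into the definition (\ref{Broudermetric}) of the soft landing condition: for any $\delta>0$ and any $\varepsilon>0$, every $(p_1,h;k,\xi)\in \Xi$ with $|h|\le\varepsilon$ satisfies
\[
|k| = 0 \;\leq\; \delta\,|h|\,|\xi|,
\]
so the soft landing estimate holds trivially at each point $p_0 \in I = d_2$. Since the soft landing condition is invariant under the action of the pseudogroup $G$ (Proposition \ref{HeleinmicroscopicI}), and since by Corollary \ref{conjugcoro} any two Euler vector fields are locally conjugate, this local verification in the chart $\Psi$ for the canonical Euler $h^j\partial_{h^j}$ implies the result for any Euler vector field. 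There is no real obstacle here because the geometry of $F$ degenerates precisely into a projection in the adapted coordinates; the only thing to verify carefully is the identification $F = \pi$ in the chart $\Psi$, which is immediate from the definition of $F$ via the moving frame and the exponential map.
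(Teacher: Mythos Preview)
Your proof is correct and uses essentially the same exponential chart $(p_1,h)$ as the paper; both arguments boil down to the observation that in this chart the $k$-component of the wave front set vanishes, so the estimate $|k|\le \delta|h||\xi|$ is trivial. Your presentation is in fact slightly more direct: where the paper first invokes Theorem \ref{Wavefrontpullback} to reduce to the conormal $\Lambda$ of the conoid $\{\Gamma=0\}$ and then computes that conormal in the chart, you observe at once that $F$ becomes the projection $\pi$ in these coordinates, so $WF(F^\star t)\subset\{k=0\}$ without needing the intermediate description via $\Lambda$.
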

\begin{proof}
First note that, by Theorem \ref{Wavefrontpullback},
$\Xi\cap T_{d_2}^\star M^2$
is contained
in $\left(Td_2\right)^\perp$
and $T^\bullet M^2\setminus d_2\subset\Lambda$
hence it suffices
to prove 
that the conormal $\Lambda$ of
the conoid $\{\Gamma=0\}$ 
satisfies the soft landing condition. 
Let $p:x\in \Omega\mapsto p(x)\in M$ be a local parametrization
of $M$,
using the local diffeomorphism $(x,h)\in \Omega\times \mathbb{R}^{n+1}
\mapsto \left(p(x),\exp_{p(x)}(h^\mu e_{\mu}(p(x)))\right)\in \mathcal{V}$ (recall that $(e_\mu)_\mu$
is the orthonormal moving frame), 
we can parametrize 
the neighborhood $\mathcal{V}$ of $d_2$ 
with some neighborhood of $\Omega\times\{0\}$ 
in $\Omega\times \mathbb{R}^{n+1}$.
In coordinates $(x,h)$, the conoid is parametrized
by the simple 
equation $\eta_{\mu\nu}h^\mu h^\nu=0$, thus it is immediate
that its conormal 
$\{(x,h;0,\xi)| \eta_{\mu\nu}h^\mu h^\nu=0, \xi_\mu=\lambda\eta_{\mu\nu}h^\nu,\lambda\in\mathbb{R}\}$ 
satisfies the soft landing condition.
\end{proof}
From the previous Lemma, we deduce the main theorem of this chapter.
The motivation
for this theorem
is that it proves 
that the two point function
satisfies the 
hypothesis
of Theorem \ref{mainthm} 
of Chapter 3 which allows
us to initialize the inductive
proof of Chapter 6 of 
renormalizability of all
$n$-point functions.
We denote by
$\Gamma^{-1}$, $\log\Gamma$ the distributions $F^{*}Q^{-1}(\cdot+i0\theta)$, $F^{*}\log Q(\cdot+i0\theta)$. Recall for any open set $U$, 
$E_s^\mu(U)$ defined in \ref{defEsmuloc}
was 
the space of distributions 
microlocally weakly homogeneous
of degree $s$. 
\begin{thm}\label{delta+bounded}
For any pair $U,V$ of smooth functions in $C^\infty(\mathcal{V})$,
the distribution
$$U\Gamma^{-1}+V\log\Gamma$$ 
is in $E_{-2}^\mu(\mathcal{V})$.
\end{thm}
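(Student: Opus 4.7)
The plan is to work locally and reduce the problem to the model situation on $\mathbb{R}^{n+1}$ analyzed in earlier sections. Fix $(p,p)\in d_2$ and choose, as in the proof of Lemma \ref{Xislc}, a parametrization $(x,h)\in\Omega\times\mathbb{R}^{n+1}\mapsto (p(x),\exp_{p(x)}(h^\mu e_\mu(p(x))))$ of a neighborhood of $(p,p)$, in which $d_2=\{h=0\}$. The map $F$ of \eqref{applipullback} reads $(x,h)\mapsto h$, so that $\Gamma(x,h)=Q(h)$, the pulled-back distribution $\Gamma^{-1}$ equals $Q^{-1}(h+i0\theta)$, and $\log\Gamma=\log Q(h+i0\theta)$. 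I will work with the canonical Euler vector field $\rho=h^\mu\partial_{h^\mu}$; by Theorem \ref{thminvmuloc} the space $E_{-2}^\mu$ does not depend on this choice.

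The two ingredients required by Definition \ref{defEsmuloc} are the soft landing condition on the wave front set and the boundedness of $\lambda^{2}t_\lambda$ in $\mathcal{D}^\prime_\Xi$ for some cone $\Xi$ satisfying the soft landing condition. The soft landing condition is already supplied by Lemma \ref{Xislc}, since $WF(U\Gamma^{-1}+V\log\Gamma)\subset WF(\Gamma^{-1})\cup WF(\log\Gamma)\subset\Xi$ by Theorem 8.1.5 of \cite{Hormander}, and the union of sets satisfying the soft landing condition trivially satisfies it. For boundedness, the homogeneity $Q(\lambda h)=\lambda^2 Q(h)$ combined with the scale-invariance of the $i0\theta$-prescription gives
\begin{equation*}
\lambda^{2}(\Gamma^{-1})_\lambda=Q^{-1}(h+i0\theta),\qquad \lambda^{2}(\log\Gamma)_\lambda=2\lambda^{2}\log\lambda+\lambda^{2}\log Q(h+i0\theta),
\end{equation*}
the first being constant in $\lambda$ and the second being bounded in $\lambda\in(0,1]$ because $\lambda^2\log\lambda$ is. Hence both $\Gamma^{-1}$ and $\log\Gamma$ individually belong to $E_{-2}^\mu$ in this chart.

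The delicate point is to incorporate the smooth coefficients $U$ and $V$. Writing $\lambda^{2}(U\Gamma^{-1})_\lambda=U(x,\lambda h)\,Q^{-1}(h+i0\theta)$, I need to check that multiplication of a distribution in $\mathcal{D}^\prime_\Xi$ by the family $\bigl(U(x,\lambda h)\bigr)_{\lambda\in(0,1]}$, which is uniformly bounded in every seminorm $\pi_{k,K}$, produces a family bounded in $\mathcal{D}^\prime_\Xi$. This is precisely what Theorem \ref{producttheoremeskin} delivers: the estimate \eqref{corollaire en or wanted by Christian} bounds $\Vert U(\cdot,\lambda\cdot)\,Q^{-1}(\cdot+i0\theta)\Vert_{N,V,\chi}$ by a constant multiple of $\pi_{2N}(U(\cdot,\lambda\cdot))$, which is uniform in $\lambda\in(0,1]$. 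The same mechanism handles $V\log\Gamma$, absorbing the tame logarithmic factor. Thus $U\Gamma^{-1}+V\log\Gamma$ is microlocally weakly homogeneous of degree $-2$ at $(p,p)$.

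Finally, since the property of belonging to $E_{-2}^\mu$ is local around each point of $d_2$ (Definition \ref{defEsmuloc}), and outside a neighborhood of $d_2$ the distribution $U\Gamma^{-1}+V\log\Gamma$ is smooth and therefore trivially weakly homogeneous, patching along a locally finite cover of $\mathcal{V}$ concludes the proof. The hard part is the second paragraph's interaction between the scaling and the distributional boundary value: one must check that the $i0\theta$-prescription is compatible with the dilation $h\mapsto\lambda h$ for $\lambda>0$, which is immediate since such a dilation preserves the forward cone and hence the tuboid used in the construction of $Q^{s}(\cdot+i0\theta)$ in Corollary \ref{fundacoro}.
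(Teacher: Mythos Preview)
Your proof is correct and follows essentially the same route as the paper: establish the soft landing condition via Lemma \ref{Xislc}, use the exact homogeneity of $\Gamma^{-1}$ and the near-homogeneity of $\log\Gamma$ under scaling, and then invoke the product estimate of Theorem \ref{producttheoremeskin} to absorb the smooth factors $U,V$. One small inaccuracy in your last paragraph: the distribution $U\Gamma^{-1}+V\log\Gamma$ is \emph{not} smooth outside a neighborhood of $d_2$ (it is singular along the null conoid $\{\Gamma=0\}$), but this is harmless since Definition \ref{defEsmuloc} only asks for the microlocal weak homogeneity at points of $d_2$, so no patching away from $d_2$ is required.
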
 
\begin{proof}
Let $\rho$ be one of the Euler vector fields defined in (\ref{Hadamdef}).
For any pair $U,V$ of smooth functions in $C^\infty(\mathcal{V})$,
by Theorem \ref{thminvmuloc}, it suffices to prove that
the family 
$$\lambda^2e^{\log\lambda\rho*}\left(U\Gamma^{-1}+V\log\Gamma\right)_{\lambda}$$ 
is bounded in $\mathcal{D}^\prime_\Xi$.
First $\Gamma^{-1}$ is homogeneous of degree $-2$ w.r.t. scaling: $\lambda^2e^{\log\lambda\rho_2*}\Gamma^{-1}=\lambda^2\lambda^{-2}\Gamma^{-1}=\Gamma^{-1}$ and $\lambda e^{\log\lambda\rho_2*}\log\Gamma=\lambda\log\lambda^{-2}\Gamma=-2\lambda\log\lambda+\lambda\log\Gamma$.
Then from these equations, we deduce that the families
$\left(\lambda^2e^{\log\lambda\rho_2*}\Gamma^{-1}\right)_{\lambda\in(0,1]}$ and 
$\left(\lambda^2 e^{\log\lambda\rho_2*}\log\Gamma\right)_{\lambda\in(0,1]}$ are bounded in $\mathcal{D}^\prime_\Xi$. 
Finally, we use that $U,V$ being smooth, the families $(U_\lambda)_\lambda,(V_\lambda)_\lambda$ are bounded in the $C^\infty$ topology in the sense that on any compact set $K$, the sup norms of the derivatives of arbitrary orders of $(U_\lambda)_\lambda,(V_\lambda)_\lambda$ are bounded.
%This means that for any function $\varphi \in D(\mathbb{R}^{2d})$, $\forall N>0, \pi_{N}\left(U_\lambda\varphi\right)<\infty$.
We can conclude using 
the estimate  
\ref{corollaire en or wanted by Christian} 
of Theorem \ref{producttheoremeskin}
to deduce
$(\lambda^{2}U_\lambda \Gamma^{-1}_\lambda)_\lambda=(U_\lambda\Gamma)$ and $(\lambda^{2}V_\lambda\log\Gamma_\lambda)_\lambda=(\lambda^{2}V_\lambda \log\Gamma+2\lambda^{2}V_\lambda\log\lambda)$ are bounded in $\mathcal{D}^\prime_{\Xi}$.
%, for any $\varphi\in D(\mathbb{R}^{2d})$ such that $\varphi|_{\text{supp }\chi}=1$:
% $$\forall N, \Vert \lambda^{2}\Gamma_\lambda^{-1}U_\lambda\Vert_{N,V,\chi}=\Vert \Gamma^{-1} U_\lambda\varphi\Vert_{N,V,\chi} $$ $$ \leqslant C\pi_{2N}(U_\lambda\varphi) (\Vert \Gamma^{-1}\Vert_{N,W,\chi}+\Vert (1+\vert\xi\vert)^{-m} \widehat{\Gamma^{-1}\chi} \Vert_{L^\infty}\pi_m(\chi))<\infty $$ 
%where we use the fact that $\lambda^{2}\Gamma_\lambda^{-1}=\Gamma$. 
\end{proof} 
\begin{coro}
Consequently, if $\Delta_+-\left(U\Gamma^{-1}+V\log\Gamma\right)\in C^\infty(\mathcal{V})$ 
for some $U,V$ 
in $C^\infty(\mathcal{V})$ 
then $\Delta_+\in E_{-2}^\mu(\mathcal{V})$. 
\end{coro}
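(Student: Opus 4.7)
The plan is to decompose $\Delta_+$ as a sum of a term that Theorem \ref{delta+bounded} handles plus a smooth remainder, and then check that each piece lies in $E_{-2}^\mu(\mathcal{V})$, exploiting the fact that $E_{-2}^\mu(\mathcal{V})$ is closed under addition. Writing $\Delta_+ = (U\Gamma^{-1} + V\log\Gamma) + R$ with $R \in C^\infty(\mathcal{V})$, Theorem \ref{delta+bounded} already gives $U\Gamma^{-1} + V\log\Gamma \in E_{-2}^\mu(\mathcal{V})$, so the only nontrivial point is to verify that smooth functions on $\mathcal{V}$ belong to $E_{-2}^\mu(\mathcal{V})$.

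For the smooth piece, I would work locally in a chart where $d_2 = \{h=0\}$ and $\rho = h^j\partial_{h^j}$, and take $\Gamma_0 = \emptyset$ as a candidate cone (which trivially satisfies the soft landing condition). Since $WF(R) = \emptyset$, it suffices to show that the rescaled family $(\lambda^{2}\, e^{\log\lambda \rho \star}R)_{\lambda \in (0,1]}$ is bounded in $\mathcal{D}^\prime_{\emptyset}(\mathcal{V}) = C^\infty(\mathcal{V})$. In coordinates, $R_\lambda(x,h) = R(x,\lambda h)$, and for any compact $K$ and multi-index $\alpha = (\alpha_x,\alpha_h)$, $\partial^\alpha R_\lambda = \lambda^{|\alpha_h|}(\partial^\alpha R)_\lambda$, so $\pi_{k,K}(\lambda^{2}R_\lambda) \leqslant \lambda^2 \sup_{\lambda \in (0,1]} \pi_{k,K'}(R)$ is uniformly bounded on compacts; hence the family is bounded in the $C^\infty$ topology, which is exactly the topology of $\mathcal{D}^\prime_\emptyset$. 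This gives $R \in E_{-2}^\mu(\mathcal{V})$ (and in fact $R \in E_{s}^\mu(\mathcal{V})$ for every $s$).

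Finally, I would record that $E_{-2}^\mu(\mathcal{V})$ is a vector space: if $t_1, t_2 \in E_{-2}^\mu$ with witnessing cones $\Gamma_1, \Gamma_2$ (both satisfying the soft landing condition), then the families $\lambda^{2}(t_1)_\lambda$ and $\lambda^{2}(t_2)_\lambda$ are each bounded in $\mathcal{D}^\prime_{\Gamma_1}$, $\mathcal{D}^\prime_{\Gamma_2}$, and their sum is bounded in $\mathcal{D}^\prime_{\Gamma_1 \cup \Gamma_2}$; Proposition \ref{sumstable} (applied without the sum component, which is trivially there for $\Gamma_2 = \emptyset$) guarantees that $\Gamma_1 \cup \Gamma_2$ still satisfies the soft landing condition. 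Combining with the previous paragraph, $\Delta_+ = (U\Gamma^{-1} + V\log\Gamma) + R \in E_{-2}^\mu(\mathcal{V})$.

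I do not anticipate any real obstacle: the substantive analytic work has all been done in Theorem \ref{delta+bounded} and in Chapter 4 (for invariance under the choice of Euler field, via Theorem \ref{thminvmuloc}, so that the cone witnessing weak homogeneity of $R$ can be chosen independently of the one witnessing weak homogeneity of $U\Gamma^{-1} + V\log\Gamma$). The corollary is essentially a bookkeeping statement: smooth perturbations do not disturb microlocal weak homogeneity.
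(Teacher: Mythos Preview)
Your proposal is correct and is exactly the intended argument: the paper states this as an immediate corollary of Theorem \ref{delta+bounded} without giving a separate proof, and your decomposition $\Delta_+ = (U\Gamma^{-1}+V\log\Gamma)+R$ together with the observation that a smooth remainder is trivially in $E^\mu_{-2}$ is precisely what is meant. Your extra care in spelling out why $R\in E^\mu_{-2}$ and why $E^\mu_{-2}$ is stable under addition is appropriate bookkeeping; note in particular that since the cone for $R$ is $\emptyset$, the sum is already bounded in $\mathcal{D}^\prime_{\Xi}$ for the same cone $\Xi$ used in Theorem \ref{delta+bounded}, so no appeal to Proposition \ref{sumstable} is strictly needed.
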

Then we can construct 
the Hadamard Riesz coefficients
from which we can deduce suitable $U,V$ 
(see the above discussion on the Borel lemma),
however this construction is really
classical and one can look at 
\cite{Zelditch} and \cite{Gar} 
Chapter 5.2 
for the construction
of these coefficients. 

\chapter{The recursive construction of the renormalization.}
\subsection{Introduction.}

This chapter deals with the construction 
of a perturbative quantum field theory 
using the algebraic formalism developed 
in (\cite{BrouderQFT},\cite{Borcherds}) 
and proves their 
renormalisability 
using all 
the analytical tools 
developped in the previous chapters. 
In the first part, 
we describe 
the Hopf algebraic formalism for 
QFT relying heavily on a paper by Christian Brouder \cite{BrouderQFT} and a paper by R. Borcherds \cite{Borcherds}. The end goal of this first part is the construction of the operator product of quantum fields denoted by $\star$. 
Then in the second part, we introduce the important concept of causality which allows 
to axiomatically define the time ordered product denoted by $T$: 
$T$ solves the causality equations
and  $T$ satisfies the Wick expansion property 
which is a Hopf algebraic formulation of the Wick theorem. 
Once we have a $T$-product, we can define quantities such as
$t_n=\left\langle 0| T\phi^{n_1}(x_1)\dots \phi^{n_k}(x_k)  | 0\right\rangle $ 
where $t_n$ is a distribution defined on configuration space $M^n$.
We prove that if $T$ satisfies 
our predefined 
axioms, 
then the collection of distributions $(t_I)_I$ 
indexed by finite subsets $I$ of $\mathbb{N}$ 
satisfies an equation which intuitively
says that
on the whole configuration space minus the 
thin diagonal $M^n \setminus d_n$, the 
distribution $t_n\in\mathcal{D}^\prime(M^n\setminus d_n)$ can be expressed in terms 
of distributions $(t_I)_I$ for 
$I\varsubsetneq \{1,\dots,n\}$.
However, this expression involves products of distributions, thus we prove a recursion theorem which states that these products of distributions are well defined and $t_n\in\mathcal{D}^\prime(M^n\setminus d_n)$ can be extended in $\mathcal{D}^\prime(M^n)$. 
This allows us to recursively construct all 
the distributions $t_n$ for all configuration spaces $(M^n)_{n\in\mathbb{N}}$.

\section{Hopf algebra, $T$ product and $\star$ product.}

In this part, we use the formalism of \cite{BrouderQFT}.
 
\subsection{The polynomial algebra of fields.}

\subsubsection{The Hopf algebra bundle over $M$.}

Let $M$ be a smooth manifold which represents space time.
% Our goal is to construct a \emph{Hopf algebra bundle} denoted 
%$M\times \underset{\mathbb{R}\oplus \mathbb{R}\phi\oplus...}{\underbrace{\mathbb{R}[\phi]}}$ 
%over the
%manifold $M$.
%\\
%\\
%\\
% At each point $x\in M$, the fiber of this bundle over $x$ is a \textbf{polynomial Hopf algebra} $$\mathbb{R}[\phi(x)]=\mathbb{R}\oplus\mathbb{R}\phi(x)\oplus...$$
%over the field $\mathbb{R}$.  
We will denote by $H=\mathbb{R}[\phi]$ 
the \textbf{polynomial algebra} in the 
\textbf{indeterminate} $\phi$ 
and we use the notation 
$\underline{H}$ 
for the trivial bundle $\underline{H}=M\times \mathbb{R}[\phi]$. 
The space of sections 
$$\Gamma\left(M,\underline{H}\right)$$ 
of this vector bundle 
will be denoted  
by the letter $\mathcal{H}$.
$\phi$ is a formal \textbf{indeterminate}
and we denote by 
$\underline{\phi}^n$
the section of $\underline{H}$
which is the 
\textbf{constant section 
equal} 
to $\phi^n$.
Any section of $\underline{H}$
(ie any element of $\mathcal{H}$) 
will be a finite combination
$\sum_{n<+\infty} a_n\underline{\phi}^n$
where $a_n\in C^\infty(M)$.
The space of section $\mathcal{H}$ is
a Hopf \textbf{module} over the \textbf{algebra}
$C^\infty(M)$.
Actually, most of the
theory of Hopf algebras
is still valid on rings
and does not require fields. 
In order to match with the physical convention, $\phi^n(x):=(x,\phi^n)$ 
denotes 
the section 
$\underline{\phi}^n=(x\mapsto \phi^n(x))$ 
evaluated at the point $x\in M$.
$\underline{1}$ 
is the unit section of this module $\mathcal{H}$.
 
The module $\mathcal{H}$ has an algebra and coalgebra structure, 
the product and coproduct of $\mathcal{H}$ are induced from the product and coproduct of $H$, for instance the product $\underline{\phi}_1\underline{\phi}_2$ of two sections is just the product computed fiber by fiber in $H$, and the coproduct $\Delta$ in $\mathcal{H}$ is just the fiberwise coproduct.

%  Thus we can use the formalism of Hopf algebras on flat space time. We define a polynomial algebra $\mathbb{R}[\phi]$: $1,\phi,...,\phi^n,...$ which consists in powers of the quantum field. Geometrically, we are considering a trivial vector bundle $M\times \mathbb{R}[\phi]$ with infinite dimensional fiber $\mathbb{R}[\phi]$. $\phi(x)$ should be written $(x,\phi)$ and refers to a section of this bundle evaluated at the point $x$.
%  
%  
%   The Hopf algebra $H$ of sections of $M\times \mathbb{R}[\phi]$ is $C^\infty(M)$-algebra.

\paragraph{The product.}

 The rule for the product is simple
$$\underline{\phi}^k \underline{\phi}^l=\underline{\phi}^{k+l} $$
which means that the sections $\underline{\phi}^k$ and $\underline{\phi}^l$  
multiply pointwise 
$$\phi^k(x)\phi^l(x)=\phi^{k+l}(x) $$

\paragraph{The coproduct.}

The coproduct on the primitive element $\underline{\phi}$ is given by: 
$$\Delta\underline{\phi}=\underline{1}\otimes_{C^\infty(M)}\underline{\phi}+\underline{\phi}\otimes_{C^\infty(M)} \underline{1}$$
and it can be extended to powers of the field $\underline{\phi}^n$ by the binomial formula:
$$\Delta \underline{\phi}^n=\sum_{k=0}^n \left(\begin{array}{c} n \\ k \end{array}\right)\underline{\phi}^k\otimes_{C^\infty(M)}\underline{\phi}^{n-k}$$

\paragraph{Some comments and the Sweedler notation.}
A special case of coassociativity will be: \begin{equation}\sum a_{(11)}\otimes a_{(12)}\otimes a_{(2)}=\sum a_{(1)}\otimes a_{(21)}\otimes a_{(22)}=\sum a_{(1)}\otimes a_{(2)}\otimes a_{(3)},\end{equation} 
in tensor notation this reads $\left(\Delta\otimes Id\right)\Delta=\left(Id\otimes \Delta\right)\Delta$
which justifies Sweedler's notation: $\Delta^{k-1}a=\sum a_{(1)}\otimes...\otimes a_{(k)}$.
\paragraph{The counit}

 The counit is the Hopf algebra analog of the vacuum expectation value
in QFT:

$\varepsilon((x,\phi^n))= \left\langle 0 |\phi^n(x) | 0 \right\rangle=\delta^n_0$.

\begin{defi}
The counit is a linear map $\varepsilon:H\mapsto C^\infty(M)$ 
which satisfies the following properties:

\begin{itemize}
\item $\varepsilon$ is an algebra morphism: $\varepsilon(ab)=\varepsilon(a)\varepsilon(b)$

\item $\varepsilon(\phi^n(x))=\delta_{n0}\underline{1}$. 
\end{itemize}
\begin{equation}\sum\varepsilon(a_1)a_2=\sum a_1\varepsilon(a_2)=a.\end{equation}
\end{defi}
We make the identification 
$\underline{\phi^0}=\underline{1}$.
\begin{ex}
We want to give an example of the defining equation $$\sum a_1\varepsilon(a_2)=a$$ 
for $a=\underline{\phi}^n$:
$\sum_{k=0}^n \left(\begin{array}{c} n \\ k \end{array}\right)\underline{\phi}^{n-k} \underset{=0\text{ if }k\neq 0}{\underbrace{\varepsilon(\underline{\phi}^{k})}}=\underline{\phi}^n\varepsilon(\underline{1})=\underline{\phi}^n .$ 
\end{ex}

\subsection{Comparison of our formalism and the classical 
formalism from physics textbooks.}

 In QFT textbooks, the fields $\phi$ are thought of as operator valued distributions.
In our formalism, the field $\phi$ is merely an \textbf{indeterminate}. 
In QFT textbooks, 
the noncommutative operator product is defined first 
and the operator product of two fields 
$\phi(x)$ and $\phi(y)$ is written $\phi(x)\phi(y)$. 
Then using the representation of $\phi$ in terms of 
annihilation and creation operators, 
physicists define the normal ordered product denoted by $:\phi(x)\phi(y):$ 
which corresponds to the commutative product of the Hopf module $\mathcal{H}$. 
Whereas in our formalism, 
we start from the commutative product and 
then use a procedure called twisting 
to define the operator product $\star$.

$$
\begin{array}{|c|c|c|c|}
\hline
&\text{Standard QFT} & \text{Our approach} & \text{Borcherds}\\
\hline
\text{Commutative product}&:\phi(x)\phi(y): & \phi(x)\phi(y) & \phi(x)\phi(y)\\ 
\hline
\text{``Operator product''}&\phi(x)\phi(y) & \phi(x)\star \phi(y) & \\
\hline
\text{VEV}& \left\langle 0|\,\ |0 \right\rangle  & \varepsilon & \\
\hline
\text{Correlation functions} & \left\langle 0|T\dots|0 \right\rangle & t=\varepsilon \circ T & \text{Feynman measure }\omega \\
\hline
 & &\text{Laplace coupling}\left(.|.\right)  & \text{Bicharacter } \Delta \\
\hline 
\end{array}
$$

\subsection{Hopf algebra bundle over $M^n$.}
A further step in the construction is to pass from the manifold $M$ to the configuration space $M^n$ of $n$ points.
In order to define products of quantum fields over $n$ points, it is natural to construct 
an algebraic setting on configuration space $M^n$.
We start again from
the fiber 
$H=\mathbb{R}[\phi]$ and consider the $n$-fold tensor product $H^{\otimes n}=\mathbb{R}[\phi]\otimes \dots \otimes \mathbb{R}[\phi]$.
Then $H^{\otimes n}$ can be generated as a polynomial algebra by the $n$ elements:
$$\phi\otimes 1\otimes\dots\otimes 1=\phi_1$$ $$1\otimes \phi\otimes 1 \otimes \dots=\phi_2$$ 
$$\dots $$
thus we deduce that $H^{\otimes n}\simeq\mathbb{R}[\phi_1,...,\phi_n]$. 
Then we denote by 
$\underline{H^{\otimes n}}$
the bundle $M^n\times\mathbb{R}[\phi_1,...,\phi_n] $ 
living over configuration space $M^n$. 
As we did in the previous part, we must consider a module over $C^\infty(M^n)$ which contains products of fields of the form
$$ \underline{\phi}^{k_1}\otimes \dots \otimes \underline{\phi}^{k_n}  ,$$ 
hence we will consider 
the $C^\infty(M^n)$-module 
of sections $\Gamma\left(M^n,\underline{H^{\otimes n}}\right)$. 
This module over the ring $C^\infty(M^n)$ will be denoted $\mathcal{H}^n$. 
Similarly, for any finite subset $I$ of the integers, let $M^I$ be
the set of maps from $I$ to $M$, we define
$$\underline{H^{\otimes I}}=M^I\times \mathbb{R}[\phi_i]_{i\in I}=
M^I\times \mathbb{R}[\phi_i]_{i\in I}.$$
Then $\mathcal{H}^I$ is defined
as the $C^\infty(M^I)$-module
of sections $\Gamma(M^I,\underline{H^{\otimes I}})$.
To consider $\mathcal{H}^n$ over the ring $C^\infty(M^n)$ is not sufficient since in QFT textbooks, 
the 
operator product of fields 
denoted by $\star$ generates 
distributions as we can see in the following example:
\begin{ex}
$\phi(x)\star\phi(y)=\Delta_+(x,y)+\phi(x)\phi(y).$
\end{ex} 
We will have to extend the ring $C^\infty(M^n)$ of smooth functions living on configuration space $M^n$ to a ring which contains \textbf{distributions}.
In order to include sections of $\underline{H}$ with distributional coefficients,
we use a tensor product technique.
This idea already appeared in the previous work of Borcherds \cite{Borvertex}, in which he 
constructs 
a vertex algebra with value in some sort of ring with singular coefficients. 
If we have an algebra $A$ of polynomials over a ring $R$ and $V$ a $R$-module,
it is always possible to define the tensor product $A\otimes_R V$ over the ring $R$.
Here we apply this construction: let $V$ be a left $C^\infty(M^n)$-module of distributions, then the tensor product $\mathcal{H}^n\otimes_{C^\infty(M^n)}V$ makes sense.
Warning: even if $\mathcal{H}^n$ is an algebra, it is no longer true that $\mathcal{H}^n\otimes_{C^\infty(M^n)}V$ is still 
an algebra since
we cannot always multiply distributions. 
\paragraph{The Rota Feynman convention.} 
Following Rota and Feynman, we write $\underline{\phi}_1\underline{\phi}_2$ instead of $\phi(x)\otimes\phi(y)$. 
We drop the tensor product symbol $\otimes$, and the elements of $\mathcal{H}^n$ are $C^\infty(M^n)$-linear combinations of products of powers of fields $\underline{\phi}_1^{i_1}\dots\underline{\phi}_n^{i_n}$.
Hence elements on the $j$-th factor of the tensor product is written $\underline{\phi}_j^{i_j}$.
Sometimes, to make our proofs look even simpler, we write $a_1...a_n$ instead of $\underline{\phi}_1^{i_1}\dots\underline{\phi}_n^{i_n}$.

\paragraph{Extending the product and coproduct.}

 To extend the product and coproduct to $\mathcal{H}^n$, we just compute products and coproducts "point by point".

\begin{defi}
We give the formula of the product for the generators of $\mathcal{H}^n$
$$\left(\underline{\phi}_1^{n_1}\dots \underline{\phi}_k^{n_k}\right)\left(\underline{\phi}_1^{l_1}\dots \underline{\phi}_k^{l_k}\right)$$
$$=\left(\underline{\phi}_1^{n_1+l_1}\dots \underline{\phi}_k^{n_k+l_k}\right) $$
and the formula of the coproduct: 
$$\Delta \left(\underline{\phi}_1^{n_1}\dots \underline{\phi}_k^{n_k}\right)=$$
$$\Delta\underline{\phi}_1^{n_1}\dots\Delta\underline{\phi}_k^{n_k} $$
\end{defi}

 Although the definition is given in terms of sections $\underline{\phi}^{n_i}_i$, 
we will sometimes follow the physics folklore and write $\phi^{n_i}(x_i)$.

\paragraph{Fundamental example}

 If we compute explicitly the coproduct for the generators, we obtain the formula:

$\Delta \left(\underline{\phi}_1^{n_1}\dots\underline{\phi}_k^{n_k}\right)=$
\begin{equation}
\sum \left(\begin{array}{c}n_1 \\ i_1 \end{array}\right)\dots\left(\begin{array}{c}n_k \\ i_k \end{array}\right)\underline{\phi}_1^{n_1-i_1}\dots\underline{\phi}_k^{n_k-i_k}\otimes\underline{\phi}_1^{i_1}\dots\underline{\phi}_k^{i_k}  \end{equation}

\paragraph{The counit and the vacuum expectation values.}
The counit is defined on $\mathcal{H}^n$ by extending the counit $$\varepsilon:\mathcal{H}\rightarrow C^\infty(M^n)$$ to $\mathcal{H}^n$ by coalgebra morphism: $$\varepsilon(uv)=\varepsilon(u)\varepsilon(v).$$
\begin{ex}
$$\varepsilon(\underline{1})=1$$
$$\varepsilon(\underline{\phi}_1\underline{\phi}^2_2\underline{1}_3)=
\varepsilon(\underline{\phi}_1)
\varepsilon(\underline{\phi}^2_2)\varepsilon(\underline{1}_3)=0\times 0\times 1=0 $$
$$\varepsilon(\underline{1}_1 \underline{1}_2\underline{1}_3)=1\times 1\times 1=1 $$
\end{ex}
It is the Hopf algebraic version of the vacuum expectation value and is an essential 
ingredient if one wants to define 
``correlation fonctions''
from 
product of fields.

\subsection{Deformation of the polynomial algebra of fields.}
\subsubsection{The non commutative product of QFT.}
\paragraph{Explanation on the notation of physicists.}
In this part, we will make the same 
notational abuse as physicists.
Instead of writing products of section as  
$\underline{\phi}_1\underline{\phi}_2$ or 
the star product of sections as $\underline{\phi}_1\star\underline{\phi}_2$,
we prefer to adopt the conventional physicist notation
$\phi(x_1)\phi(x_2) $ for the commutative product and
$\phi(x_1)\star \phi(x_2) $ for the star product.
The meaning of the formulas is changed,
since 
in the physicist's notation, 
we multiply sections then evaluate them at points $(x_1,x_2)$ of the configuration space $M^2$
whereas in the mathematical notation, we just multiply two sections 
$\underline{\phi}_1$ and
$\underline{\phi}_2$.

\subsubsection{Examples of Wick theorems coming from physics.}

We give the general QFT formula for the star product in the notations of physicists
$$\phi_1^{n_1}(x_1)\star\dots\star \phi_k^{n_k}(x_k)$$
$$=\sum \left(\begin{array}{c}n_1 \\ i_1 \end{array}\right)\dots\left(\begin{array}{c}n_k \\ i_k \end{array}\right)\underset{\text{Distribution on }M^n }{\underbrace{\left\langle 0|\left(\phi_1^{n_1-i_1}(x_1)\star\dots\star\phi_k^{n_k-i_k}(x_k)\right)|0\right\rangle}}\phi_1^{i_1}(x_1)\dots\phi_k^{i_k}(x_k).
$$
In Physics, the product of fields inside the $\left\langle 0|\dots|0\right\rangle$ 
is computed using Wick's theorem. 
Wick's theorem for time ordered product just means:
$T( \phi_1...\phi_n )=:\text{all possible contractions}:$
when we contract two fields, it just means we
choose some pairs of fields 
in all possible ways
and replace them by a propagator
which is a distributional two point function $\Delta_+$.
We will represent a Wick contraction of two fields with the symbol 
$\overbrace{\phi(x_1)\phi(x_2)}$ and by definition
$\overbrace{\phi(x_1)\phi(x_2)}=\Delta_+(x_1,x_2)$.
We then give some simple examples of $\star$ products in order to illustrate the
mechanism at work.
\begin{ex}
$$\phi(x_1)\star\phi(x_2)=\phi(x_1)\phi(x_2)+\overbrace{\phi(x_1)\phi(x_2)}$$
$$=\phi(x_1)\phi(x_2)+\Delta_+(x_1,x_2)$$
$$\phi(x_1)\star\phi(x_2)\star\phi(x_3)=\phi(x_1)\phi(x_2)\phi(x_3) + \left(\overbrace{\phi(x_1)\phi(x_2)}\phi(x_3)+\text{cyclic} \right)$$ $$=\phi(x_1)\phi(x_2)\phi(x_3) + \left(\Delta_+(x_1,x_2)\phi(x_3)+\text{cyclic} \right) $$
$$\phi^2(x_1)\star\phi^2(x_2)=\phi^2(x_1)\phi^2(x_2)+4 \overbrace{\phi(x_1)\phi(x_2)}\phi(x_1)\phi(x_2)
+2\overbrace{\phi(x_1)\phi(x_2)}\overbrace{\phi(x_1)\phi(x_2)}  $$
$$=\phi^2(x_1)\phi^2(x_2)+4 \Delta_+(x_1,x_2)\phi(x_1)\phi(x_2)
+2\Delta_+^2(x_1,x_2) $$
\end{ex}
\subsubsection{Functorial pull-back operation.} 
Let $I$ be a finite subset of $\mathbb{N}$.
Then we 
denote by $M^I$ the configuration space
of points labelled by $I$.   
In order to define the $\star$ product of fields, we need 
to define some
operations 
which allows us to pull-back some products of fields living in configuration space
$M^I,I\subset \{1,...,n\}$, to the larger configuration space
$M^n$.
\begin{ex}
Consider $\phi(x_1)\star \phi(x_2)\in \mathcal{H}^2$, 
we will illustrate the embedding
of the element
$\phi(x_1)\star \phi(x_2)$ in $\mathcal{H}^4$.
$$p_{\{1234\}\mapsto\{12\}}^*\left(\phi(x_1)\star \phi(x_2)\right)=\left(\phi(x_1)\star \phi(x_2)\right) 1(x_3)1(x_4)$$
\end{ex} 
If $J$ is another finite subset of $\mathbb{N}$ such that $I\subset J$, then there is a canonical
projection $p_{J\mapsto I}:M^J\mapsto M^I$ 
which induces by \textbf{pull-back} a morphism 
$$
\begin{array}{cccc}
p_{J\mapsto I}^*:& C^\infty(M^I)&\mapsto & C^\infty(M^J) 
\\ & f(x_i)_{i\in I}&\mapsto &p_{J\mapsto I}^*f(x_j)_{j\in J}=1(x_j)_{j\in J\setminus I}\otimes_{C^\infty(M^J)} f(x_i)_{i\in I},
\end{array}
$$
$p^*$ is an algebra morphism.
To each configuration space $M^I$, 
we first define
the bundle 
$\underline{H}^I=M^I\times \mathbb{R}[\phi_i]_{i\in I}$, 
and taking the sections of this bundle, 
we obtain the $C^\infty(M^I)$ module 
$\mathcal{H}^I=\Gamma\left( M^I , \underline{H}^I \right)$.
If $I\subset J$,
the idea is that
the morphism 
$p^\star_{J\mapsto I}$
extends
to Hopf modules
by the pull-back operation 
$p^*_{J\mapsto I}$
lifts functorially to a map $\mathcal{H}^I \mapsto \mathcal{H}^J$ 
given by the formula:
$$
\begin{array}{cccc}
p_{J\mapsto I}^*:& \mathcal{H}^I &\mapsto & \mathcal{H}^J 
\\ & \bigotimes_{i\in I} \underline{a}_i&\mapsto &p_{J\mapsto I}^*\left(\bigotimes_{i\in I} \underline{a}_i\right)=\bigotimes_{j\in J\setminus I} \underline{1}_j\otimes_{C^\infty(M^J)} \bigotimes_{i\in I} \underline{a}_i
\end{array}
$$
where $\underline{1}_j$ is the unit
section of the bundle $\underline{H}$ over
the $j$-th factor manifold $M$.

This pull-back operation
allows us to characterize
collections $(T_I)_{I}$, where
each $T_I$
is
a $C^\infty(M^I)$-module
\textbf{morphism}:
$T_I: \mathcal{H}^I\mapsto 
\mathcal{H}^I\otimes_{C^\infty(M^I)}V^I$,   
which
satisfy 
some good compatibility relations 
with the 
collection
of inclusions 
$p_{J\mapsto I}^*:\mathcal{H}^I\hookrightarrow \mathcal{H}^J$,
which means
that $$\forall A_I\in\mathcal{H}^I, 
T_J\left(p^\star_{J\mapsto I}A_I\right)=p^\star_{J\mapsto I}T_I\left(A_I\right).$$
This can also be formulated
as the commutativity
of the diagram:
$$ 
\begin{array}{cccc}
p_{J\mapsto I}^*: &\mathcal{H}^I& \rightarrow & \mathcal{H}^J\\
&T_I\downarrow & & T_J\downarrow\\
&\mathcal{H}^I & \rightarrow & \mathcal{H}^J,
\end{array}
$$ 
for all $I\subset J$.
\begin{ex}
$T_3(\phi^{i_1}(x_1)\otimes \phi^{i_2}(x_2)\otimes 1(x_3))=T_{2}(\phi^{i_1}(x_1)\otimes\phi^{i_2}(x_2))\otimes 1(x_3)$.
\end{ex}
% There is also an associative product $\circ$, where $\circ$ is called $\textbf{twisted}$ product, this associative product represents the \textbf{time ordered product} of QFT:
%\begin{equation}
%T(a_{1}...a_{n})=a_{1}\circ...\circ a_{n}
%\end{equation}
%but the construction of $\circ$ cannot be rigorously defined.
\paragraph{Domain of definition of the $\star$ product.}
For each $I\subset \mathbb{N}$, we need to twist $\mathcal{H}^I$ with a left $C^\infty(M^I)$ module  $V^I\subset \mathcal{D}^\prime(M^I)$ of distributional coefficients and we consider instead $\mathcal{H}^I\otimes_{C^\infty(M^I)}V^I$. 
% Since we saw that for QFT applications, $\mathcal{H}^I$ should also contain distributional coefficients. 
% 
% For example, $\phi(x_1)\star \phi(x_2)=\underset{\text{distributions}}{\underbrace{\Delta_+(x_1,x_2)}}+\phi(x_1)\phi(x_2)\in\mathcal{H}^2\otimes_{C^\infty(M^2)}V^2$.  
For any finite subsets $I,J$ 
of $\mathbb{N}$,
such that $I\cap J=\emptyset$, 
our star product will 
be well defined as a bilinear map
$$\star:\mathcal{H}^I \times \mathcal{H}^J \mapsto \mathcal{H}^{I\cup J}\otimes_{C^\infty(M^{I\cup J})} V^{I\cup J} $$
where  $V^I ,V^J,V^{I\cup J}$ are respectively the left 
$C^\infty(M^{I}),C^\infty(M^{J}),C^\infty(M^{I\cup J})$-module
which contains the distributional coefficients.
The star product is supposed to satisfy the
following rule
$$\forall (u,v)\in V^I\times V^J, \forall (P,Q)\in \mathcal{H}^I \times \mathcal{H}^J,$$ 
$$ \left(u P \right)\star \left(vQ \right) = \left(p_{J\cup I\mapsto I}^*u\right) \left( p_{J\cup I\mapsto J}^*v\right) \left(P\star Q\right)$$ 

\subsection{The construction of $\star$.}
We will describe a general procedure called twisting, which allows to construct non commutative associative products from the usual commutative product of fields and an object called \emph{Laplace coupling} $(.\vert .)$.
The Laplace coupling is the Hopf algebraic machine which produces "the contractions of pairs of fields" that 
we need in order 
to reproduce the Wick theorem.
In the sequel, we will use capital letters to denote strings of operators
\begin{ex}
$A=a_1 \dots a_n$ 
%and $T(A)=a_{1}\star \dots \star a_{n}$.
where $A\in \mathcal{H}^n$ and each $a_i\in \mathcal{H}^{\{i\}}$ .
\end{ex}

 And for $A=a_1\dots a_n,B=b_1\dots b_n$, the
commutative product $AB$ 
means the commutative product over each point $AB=(a_1b_1)\dots (a_nb_n)$.
\paragraph{The Laplace coupling.} 
For our Hopf algebras, the contraction operation of the Wick theorem in QFT is realised by the Laplace coupling: 
\begin{defi}\label{Laplace}
Let $I,J$ be \textbf{finite disjoint} subsets of $\mathbb{N}$.
The Laplace coupling is defined as a bilinear map
$\left(.\vert . \right): \mathcal{H}^I\otimes \mathcal{H}^J \mapsto V^{I\cup J}$ which satisfies the relations
\begin{eqnarray}
\left(\phi(x_1)\vert\phi(x_2) \right)=\Delta_+(x_1,x_2)
\\ \left(AB\vert C\right)=\sum\left(A\vert C_{(1)}\right)\left(B\vert C_{(2)}\right) 
\\ \left(1\vert A\right)=\left(A\vert 1\right)=\varepsilon(A)
\end{eqnarray}
\\ more generally we have the coassociative version $\left(A^1...A^n\vert B \right)=\sum\Pi_{k=1}^n \left(A^k \vert B_{(k)}\right) $.
\end{defi}
We notice that the Laplace coupling of two fields $\phi(x_1),\phi(x_2)$ is exactly the Wick contraction beetween these two fields: $\left(\phi(x_1)\vert\phi(x_2) \right)=\overbrace{\phi(x_1)\phi(x_2)}=\Delta_+(x_1,x_2)$.
\begin{ex}
$$\left(\phi(x_1)\vert \phi(x_2)\right)=\Delta_+(x_1,x_2).$$
$$\left(\phi^2(x_1)\vert \phi^2(x_2)\right)=2\Delta_+^2(x_1,x_2)$$
$$\left(\phi^2(x_1)\vert \phi(x_2)\phi(x_3)\right)=2\Delta_+(x_1,x_2)\Delta_+(x_1,x_3).$$
\end{ex}
\begin{prop}
Let $\left(.\vert . \right)$ be a Laplace coupling as in the definition (\ref{Laplace}). Then $\left(.\vert . \right)$ is entirely determined by the two point function $\left(\phi(x_1)\vert\phi(x_2) \right)=\Delta_+(x_1,x_2)$.
Furthermore, we have the relation: 
$\left(\phi^k(x_1)\vert\phi^l(x_2)\right)=\delta_{kl}k!\Delta_+^k(x_1,x_2)$.
\end{prop}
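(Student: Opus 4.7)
The plan is to reduce everything to the axioms (a), (b), (c) of Definition \ref{Laplace} by exploiting the fact that the polynomial algebra $\mathcal{H}$ is generated over $C^\infty(M)$ by the primitive element $\underline{\phi}$. Once the coupling is pinned down on pairs of primitive generators of $\mathcal{H}^I$ and $\mathcal{H}^J$, bilinearity together with the coassociative multiplicativity axiom extend it uniquely to arbitrary products; this proves the first assertion that the coupling is determined by $\Delta_+$. To establish the explicit formula $(\phi^k(x_1)|\phi^l(x_2)) = \delta_{kl}\,k!\,\Delta_+^k(x_1,x_2)$ I would proceed in two steps.

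The first step is to compute $(\phi(x_1)|\phi^i(x_2))$. I would apply the right-hand coassociative identity $(A|BC) = \sum(A_{(1)}|B)(A_{(2)}|C)$ iterated on $\phi^i(x_2) = \phi(x_2)\cdots\phi(x_2)$ to get
\[
(\phi(x_1)|\phi(x_2)^i) = \sum \prod_{j=1}^i (\phi(x_1)_{(j)}|\phi(x_2)).
\]
Since $\underline{\phi}$ is primitive, the iterated coproduct is
\[
\Delta^{i-1}\phi(x_1) = \sum_{j=1}^i \underbrace{1\otimes\cdots\otimes\phi(x_1)\otimes\cdots\otimes 1}_{\phi(x_1)\ \text{in position}\ j},
\]
so every term in the expansion contains at least one factor $(1|\phi(x_2)) = \varepsilon(\phi(x_2)) = 0$ as soon as $i\geq 2$. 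The remaining cases give $(\phi(x_1)|\phi(x_2)) = \Delta_+(x_1,x_2)$ and $(\phi(x_1)|1) = \varepsilon(\phi(x_1)) = 0$, hence $(\phi(x_1)|\phi^i(x_2)) = \delta_{i,1}\Delta_+(x_1,x_2)$.

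The second step uses the left-hand coassociative axiom (b) iterated on $\phi^k(x_1) = \phi(x_1)\cdots \phi(x_1)$. Combined with the iterated coproduct
\[
\Delta^{k-1}\phi^l(x_2) = \sum_{i_1+\cdots+i_k = l}\binom{l}{i_1,\ldots,i_k}\phi^{i_1}(x_2)\otimes\cdots\otimes\phi^{i_k}(x_2),
\]
one obtains
\[
(\phi^k(x_1)|\phi^l(x_2)) = \sum_{i_1+\cdots+i_k = l}\binom{l}{i_1,\ldots,i_k}\prod_{j=1}^k(\phi(x_1)|\phi^{i_j}(x_2)).
\]
By the first step only tuples with $i_j = 1$ for all $j$ survive, which forces $l=k$; in that case the multinomial coefficient $\binom{k}{1,\ldots,1} = k!$ together with a single surviving term $\Delta_+^k(x_1,x_2)$ yields the claimed identity.

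The main obstacle is that Definition \ref{Laplace} explicitly spells out only the left coassociative identity, whereas the first step above relies on its right-hand counterpart. I would justify this by invoking the bicharacter property of the Laplace coupling, standard in the Borcherds framework followed here, which for the commutative cocommutative Hopf module $\mathcal{H}$ yields both coassociative identities simultaneously; alternatively, one can derive the right-hand identity by induction on the polynomial degree of the second argument, using the commutativity of the product in $\mathcal{H}$, the cocommutativity of $\Delta$, and the stated axioms (b) and (c).
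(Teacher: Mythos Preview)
The paper does not prove this proposition; it simply refers to \cite{BrouderQFT}. Your argument is the standard one and is correct once the right-multiplicativity (bicharacter) axiom $(A\vert BC)=\sum(A_{(1)}\vert B)(A_{(2)}\vert C)$ is in force, and you are right to flag that Definition~\ref{Laplace} only states the left-hand version explicitly. Your primary fix---invoking the bicharacter property, which is indeed part of the Laplace pairing in the Borcherds/Brouder setting the chapter is following---is the correct resolution.

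Your fallback suggestion, however, should be dropped. One cannot derive the right-hand identity from axioms (b) and (c) alone, even using commutativity and cocommutativity: with only left multiplicativity, every value $(\phi^k(x_1)\vert\phi^l(x_2))$ reduces to a polynomial in the ``atomic'' quantities $(\phi(x_1)\vert\phi^i(x_2))$, and for $i\geq 2$ these remain completely unconstrained by (a)--(c). So the induction you allude to cannot get started. Keep the bicharacter justification and delete the alternative.
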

\begin{proof}
See \cite{BrouderQFT}.
\end{proof}
The function $\Delta_+(x_1,x_2)$ appearing in the definition of the Laplace coupling should be a \textbf{propagator} for the Wave operator.
In QFT, it is the \emph{Wightman propagator} $\Delta_+$ defined in chapter $5$.
\begin{defi}
The star product $\star$ is defined as follows. Let $I,J$ be any finite disjoint subsets of $\mathbb{N}$, 
for all 
$(A,B)\in \mathcal{H}^I\times \mathcal{H}^J$:
\begin{equation}
A\star B=\sum\left(A_{(1)}\vert B_{(1)} \right) A_{(2)} B_{(2)}
\end{equation}
where  $(.\vert.)$ denotes the Laplace coupling and $A_{(2)} B_{(2)}$ denotes the usual commutative product of fields.
\end{defi}
\begin{ex}
$$\phi^3(x_1)\star  \phi^3(x_2)=6\Delta_+^{3}(x_1,x_2)+6\Delta_+^2(x_1,x_2)\phi(x_1)\phi(x_2)+3\Delta_+(x_1,x_2)\phi(x_1,x_2)+\phi^3(x_1)\phi^3(x_2).$$
$$ \phi^2(x_1)\star (\phi(x_2)\phi(x_3))$$ $$=\phi^2(x_1)\phi(x_2)\phi(x_3)+2\phi(x_1) \phi(x_2)\Delta_+(x_1,x_3)+2\phi(x_1) \phi(x_3)\Delta_+(x_1,x_2)+2\Delta_+(x_1,x_2)\Delta_+(x_1,x_3)  $$
\end{ex}
 From the last example, we notice the important fact that the star product 
$A\star B$ is not automatically well defined 
because the computation of the star product 
involves products of distributions and we have yet to prove that these products are well defined.
\subsubsection{The counit $\varepsilon$.}
As we already said, the counit plays the role of the vacuum expectation value in QFT. 
We first recall the most important result about the counit $\varepsilon$, it is the coassociativity equation:
$$A=\sum \varepsilon(A_{(1)})A_{(2)} $$
\begin{ex}
$$\sum \varepsilon(\underline{\phi}^2_{(1)})\underline{\phi}^2_{(2)}=
\varepsilon(\underline{\phi}^2)\underline{1}+2\underline{\phi}\varepsilon(\underline{\phi})  
+\underline{\phi}^2\varepsilon(\underline{1})=
0+0+\underline{\phi}^2 1=\underline{\phi}^2
$$
\end{ex}
We give an example of the same quantity expressed in the language of
Hopf algebras and the conventional QFT language so that the reader can compare:
\begin{ex}
$$\varepsilon(\underline{\phi}_1^2\star\underline{\phi}_2^2)=\varepsilon\left(\left( \underline{1}|\underline{1} \right)\underline{\phi}_1^2\underline{\phi}_2^2
+4\left( \underline{\phi}_1|\underline{\phi}_2 \right)\underline{\phi}_1\underline{\phi}_2 
+\left( \underline{\phi}^2_1|\underline{\phi}^2_2 \right)\right) $$ 
$$=\varepsilon(
\underline{\phi}_1^2\underline{\phi}_2^2+4\Delta\underline{\phi}_1\underline{\phi}_2
+2\Delta_+^2 )=0+0+2\Delta_+^2  $$
$$\left\langle 0|\phi^2(x_1)\phi^2(x_2) |0\right\rangle=2\Delta_+^2(x_1,x_2) $$
\end{ex}

\subsection{The associativity of $\star$.}
For the moment, the $\star$ product we constructed is just bilinear. 
We have to prove it is associative.
First, let us prove some lemmas.
\begin{lemm} 
The $\star $ product satisfies the identities:
\begin{eqnarray}
\label{first}
\Delta(a\star  b)=\sum\left(a_{(1)}\star  b_{(1)}\right)\otimes a_{(2)} b_{(2)}  
\\ \label{second} \left(a\star  b \vert c \right)=\left(a \vert b \star  c \right)
\\ \label{third} \varepsilon \left(a \star  b\right)=\left(a \vert b\right)
\end{eqnarray}
\end{lemm}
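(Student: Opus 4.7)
The plan is to expand all three identities in Sweedler notation and reduce each to a routine manipulation combining three facts: the defining bicharacter property of the Laplace coupling ($\left(AB\vert C\right)=\sum(A\vert C_{(1)})(B\vert C_{(2)})$ and its coassociative extension), the fact that the coproduct $\Delta$ on $\mathcal{H}$ is an algebra morphism, and the counit / cocommutativity properties of the Hopf algebra $\mathbb{R}[\phi]$ (which is manifestly cocommutative since $\Delta\underline{\phi}=\underline{1}\otimes\underline{\phi}+\underline{\phi}\otimes\underline{1}$). I would treat the three identities in the order (\ref{third}), (\ref{first}), (\ref{second}), because each one uses slightly more structure than the previous.

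For identity (\ref{third}), I would simply apply $\varepsilon$ to the definition
$a\star b=\sum\bigl(a_{(1)}\vert b_{(1)}\bigr)\,a_{(2)}b_{(2)}$.
Since $\varepsilon$ is an algebra morphism and the coefficients $(a_{(1)}|b_{(1)})$ lie in the ring of distributional coefficients (so are unaffected by $\varepsilon$), I get $\sum(a_{(1)}|b_{(1)})\varepsilon(a_{(2)})\varepsilon(b_{(2)})$, and by bilinearity of the coupling together with the counit identity $\sum a_{(1)}\varepsilon(a_{(2)})=a$ this collapses to $(a|b)$.

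For identity (\ref{first}), the key is that $\Delta$ is an algebra morphism and coassociative. Applying $\Delta$ to the definition of $a\star b$ gives $\sum(a_{(1)}|b_{(1)})\,\Delta(a_{(2)})\Delta(b_{(2)})$. Using the iterated coproduct notation $\Delta^2 a=\sum a_{(1)}\otimes a_{(2)}\otimes a_{(3)}$, this becomes $\sum(a_{(1)}|b_{(1)})\,(a_{(2)}b_{(2)})\otimes(a_{(3)}b_{(3)})$. On the right-hand side, I expand $a_{(1)}\star b_{(1)}$ using the definition, producing $\sum(a_{(1)(1)}|b_{(1)(1)})\,a_{(1)(2)}b_{(1)(2)}\otimes a_{(2)}b_{(2)}$, and the two expressions coincide after applying coassociativity $(\Delta\otimes\mathrm{id})\Delta=(\mathrm{id}\otimes\Delta)\Delta$ to both $a$ and $b$.

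For identity (\ref{second}), I expand both sides using the coassociative version of the bicharacter property. The left side becomes
\[
(a\star b\vert c)=\sum\bigl(a_{(1)}\vert b_{(1)}\bigr)\bigl(a_{(2)}\vert c_{(1)}\bigr)\bigl(b_{(2)}\vert c_{(2)}\bigr),
\]
and the right side becomes
\[
(a\vert b\star c)=\sum\bigl(b_{(1)}\vert c_{(1)}\bigr)\bigl(a_{(1)}\vert b_{(2)}\bigr)\bigl(a_{(2)}\vert c_{(2)}\bigr).
\]
The main obstacle, and the only non-purely-formal step, is to reconcile these two sums. I expect this to follow from the cocommutativity of $\Delta$ on $\mathcal{H}$: swapping $c_{(1)}\leftrightarrow c_{(2)}$ in the first expression (and relabelling) rewrites it as $\sum(a_{(1)}|b_{(1)})(b_{(2)}|c_{(1)})(a_{(2)}|c_{(2)})$, and then swapping $b_{(1)}\leftrightarrow b_{(2)}$ in the second expression yields the same sum. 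Once (\ref{first}), (\ref{second}) and (\ref{third}) are established, they will serve as the algebraic building blocks for the associativity of $\star$ in the lemma that follows, and for the Hopf-algebraic Wick expansion property of $T$ later in the chapter.
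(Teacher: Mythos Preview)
Your proposal is correct and follows essentially the same route as the paper's proof: each identity is obtained by expanding the definition of $\star$ in Sweedler notation and invoking, respectively, the counit axiom, the multiplicativity and coassociativity of $\Delta$, and the bicharacter property of the Laplace coupling together with cocommutativity to permute the Sweedler legs. The only cosmetic differences are the order in which you treat the three identities and the fact that you expand both sides of (\ref{second}) explicitly before matching, whereas the paper works from the left-hand side only.
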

Note that
$\Delta$
is the coproduct
of the commutative product
and not the coproduct
of $\star$.
\begin{proof}
$$\Delta(a\star  b)=\sum\left(a_{(1)}\vert b_{(1)}\right)\Delta(a_{(2)} b_{(2)}) =\sum\left(a_{(1)}\vert b_{(1)}\right) a_{(2)} b_{(2)}\otimes a_{(3)}b_{(3)}$$ $$=\sum\left(a_{(11)}\vert b_{(11)}\right) a_{(12)} b_{(12)}\otimes a_{(2)}b_{(2)}  =\sum\left(a_{(1)}\star  b_{(1)}\right)\otimes a_{(2)} b_{(2)}.$$
$$\left(a\star  b \vert c \right)=\sum\left(a_{(1)} \vert b_{(1)}\right)\left(a_{(2)}b_{(2)} \vert c\right)
=\sum\left(a_{(1)} \vert b_{(1)}\right)\left(a_{(2)} \vert c_{(1)}\right)\left(b_{(2)} \vert c_{(2)}\right)$$
$$=\sum\left(a_{(1)} \vert b_{(2)}\right)\left(a_{(2)} \vert c_{(2)}\right)\left(b_{(1)} \vert c_{(1)}\right) $$ because by cocommutativity of the field coproduct, we can permute $b_{(1)},b_{(2)}$ and $c_{(1)},c_{(2)}$.
\\ $\varepsilon(a\star  b)=\sum\varepsilon(\left(a_{(1)}\vert b_{(1)}\right) a_{(2)} b_{(2)} )=\sum\left(a_{(1)} \varepsilon(a_{(2)}) \vert b_{(1)} \varepsilon(b_{(2)})\right)=\sum\left( a\vert b\right)$
\end{proof}
More generally, we have a distributed version of (\ref{first}):
\begin{prop}
$\star $ satisfies the identity:
\begin{equation}\label{boostedfirst}
\Delta(a_{1}\star \dots \star  a_{n})=\sum(a_{1(1)}\star \dots \star  a_{n(1)})\otimes a_{1(2)}...a_{n(2)}
\end{equation} 
%\begin{equation}
%\Delta T(A)=T(A_{(1)})\otimes A_{(2)}
%\end{equation} 
\end{prop}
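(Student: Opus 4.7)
The plan is to prove the identity by induction on $n$, using the $n=2$ case (\ref{first}) from the preceding lemma as the base case. The inductive step exploits the associativity of $\star$ (which the paper has just established in the previous subsection) to split off the last factor.

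For the inductive step, assume (\ref{boostedfirst}) holds for $n-1$. I would write
$$\Delta(a_1 \star \cdots \star a_n) = \Delta\bigl((a_1 \star \cdots \star a_{n-1}) \star a_n\bigr)$$
and apply the $n=2$ identity (\ref{first}) with $a = a_1 \star \cdots \star a_{n-1}$ and $b = a_n$ to obtain
$$\Delta(a_1 \star \cdots \star a_n) = \sum \bigl((a_1 \star \cdots \star a_{n-1})_{(1)} \star a_{n(1)}\bigr) \otimes (a_1 \star \cdots \star a_{n-1})_{(2)}\, a_{n(2)}.$$
Here $\Delta$ denotes the (cocommutative) coproduct on the commutative algebra of fields, so Sweedler notation is unambiguous. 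Next I would substitute the induction hypothesis into the factors $(a_1 \star \cdots \star a_{n-1})_{(1)} \otimes (a_1 \star \cdots \star a_{n-1})_{(2)}$, which gives
$$\sum (a_{1(1)} \star \cdots \star a_{n-1(1)}) \otimes a_{1(2)} \cdots a_{n-1(2)}.$$
Plugging this in and recognizing $(a_{1(1)} \star \cdots \star a_{n-1(1)}) \star a_{n(1)} = a_{1(1)} \star \cdots \star a_{n(1)}$ by associativity of $\star$, together with the obvious identity $(a_{1(2)} \cdots a_{n-1(2)}) \, a_{n(2)} = a_{1(2)} \cdots a_{n(2)}$ in the commutative algebra, yields the desired formula.

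The only slightly delicate point is to make sure that the coassociativity reshuffling (implicit when we write $(a_{i(1)})_{(1)} \otimes (a_{i(1)})_{(2)} \otimes a_{i(2)} = a_{i(1)} \otimes a_{i(2)} \otimes a_{i(3)}$ for each $i$) is carried out consistently when combining the base case with the induction hypothesis. This is a routine Sweedler-calculus manipulation, analogous to the one already performed in the proof of (\ref{first}), and does not pose a genuine obstacle. The main thing to keep track of is that all occurrences of $\Delta$ here refer to the cocommutative field coproduct on the tensor algebra of polynomial fields, not to any coproduct for $\star$; since this coproduct is an algebra morphism for the commutative product but not for $\star$, it is precisely the interaction between $\Delta$ and $\star$ encoded by (\ref{first}) that makes the formula nontrivial.
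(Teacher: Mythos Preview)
Your proof is correct and is exactly the natural argument. The paper itself does not supply a proof of this proposition; it is stated immediately after the associativity theorem as a ``distributed version'' of (\ref{first}), implicitly leaving the induction you wrote out as an exercise. Your handling of the inductive step, including the coassociativity bookkeeping, is precisely what is needed.
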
 
\begin{thm}
The product $\star$
is associative provided that the products
of distributions coming
from the Laplace couplings
make sense.
\end{thm}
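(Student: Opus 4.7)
The plan is to reduce associativity to the three compatibility identities \eqref{first}, \eqref{second}, \eqref{third} that were just established for the Laplace coupling. The whole computation will happen at the purely algebraic level: we pretend every distributional product is defined, carry out the manipulations symbolically, and the final hypothesis of the theorem (all relevant products of distributions make sense) is precisely what justifies every intermediate Laplace pairing that appears.

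First I would expand $(A\star B)\star C$ directly from the definition:
\begin{equation*}
(A\star B)\star C=\sum\bigl((A\star B)_{(1)}\bigm|C_{(1)}\bigr)\,(A\star B)_{(2)}\,C_{(2)}.
\end{equation*}
Then I apply identity \eqref{first}, which identifies $(A\star B)_{(1)}=A_{(1)}\star B_{(1)}$ and $(A\star B)_{(2)}=A_{(2)}B_{(2)}$ in Sweedler notation. Inserting this gives
\begin{equation*}
(A\star B)\star C=\sum\bigl(A_{(1)}\star B_{(1)}\bigm|C_{(1)}\bigr)\,A_{(2)}B_{(2)}C_{(2)}.
\end{equation*}
Now identity \eqref{second} lets me move the $\star$ across the Laplace pairing, converting this to
\begin{equation*}
(A\star B)\star C=\sum\bigl(A_{(1)}\bigm|B_{(1)}\star C_{(1)}\bigr)\,A_{(2)}B_{(2)}C_{(2)}.
\end{equation*}

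The symmetric computation for $A\star(B\star C)$ is simpler: I unfold the outer $\star$, apply \eqref{first} to replace $(B\star C)_{(1)}$ by $B_{(1)}\star C_{(1)}$ and $(B\star C)_{(2)}$ by $B_{(2)}C_{(2)}$, and obtain exactly the same expression. Comparing the two yields $(A\star B)\star C=A\star(B\star C)$. I would also record that the commutativity of the pointwise product $A_{(2)}B_{(2)}C_{(2)}$ and the cocommutativity of $\Delta$ are what allow the Sweedler indices to be relabelled freely, so no sign or ordering issue appears.

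The main subtlety is not the algebraic identity itself, which is essentially a one-line Sweedler calculation, but the hidden analytic content carried by the hypothesis. Each Laplace pairing $(A_{(1)}\mid B_{(1)}\star C_{(1)})$ implicitly contains products of powers of $\Delta_+$ with the distributional coefficients produced by the inner $\star$, and these are exactly the products that the statement of the theorem assumes to be well defined. I would therefore close the proof by a short remark that, in the formalism of Hopf modules $\mathcal{H}^I\otimes_{C^\infty(M^I)}V^{I\cup J\cup K}$, the algebraic identity holds as soon as the constituent products of distributions lie in the chosen space $V^{I\cup J\cup K}$, which is precisely the hypothesis; the genuine work of verifying that such products make sense is deferred to the microlocal analysis of Chapters 3 and 4 together with the wave front estimates on $\Delta_+$ from Chapter 5.
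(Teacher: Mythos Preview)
Your proof is correct and follows essentially the same route as the paper: expand via the definition, apply \eqref{first} to split the coproduct of $A\star B$, use \eqref{second} to shift the $\star$ across the Laplace pairing, and then recognize (again by \eqref{first}) the result as $A\star(B\star C)$. The only cosmetic difference is that the paper chains the identities in one direction from $(a\star b)\star c$ to $a\star(b\star c)$, while you compute both sides down to the common expression $\sum(A_{(1)}\mid B_{(1)}\star C_{(1)})A_{(2)}B_{(2)}C_{(2)}$; the substance is identical.
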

\begin{proof}
$$(a\star b)\star c=\sum\left((a \star b)_{(1)} \vert c_{(1)} \right)(a\star b)_{(2)}  c_{(2)}=\sum\left(a_{(1)} \star b_{(1)} \vert c_{(1)} \right)a_{(2)} b_{(2)}  c_{(2)}$$ by (\ref{first}) $$=\sum\left(a_{(1)} \vert b_{(1)} \star c_1 \right)a_{(2)} b_{(2)}  c_{(2)}$$ by (\ref{second}) $$=\sum\left(a_{(1)} \vert (b \star c)_{(1)} \right)a_{(2)} (b \star c)_{(2)}$$ by (\ref{first}) $$=a\star(b\star c)$$
\end{proof}
\begin{coro}
$a_1\star ... \star a_n$ is well defined provided that the products
of distributions coming
from the Laplace couplings
make sense.
\end{coro}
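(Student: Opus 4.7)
The plan is to proceed by induction on $n$, using the associativity theorem just established as the crucial input. The base cases $n=1$ (trivial) and $n=2$ (the definition of $\star$, well-defined once the Laplace couplings and the ensuing distributional products make sense) fix the starting point. For the inductive step, assuming $a_1 \star \dots \star a_{n-1}$ has been defined unambiguously as an element of $\mathcal{H}^{\{1,\dots,n-1\}} \otimes_{C^\infty(M^{n-1})} V^{\{1,\dots,n-1\}}$, I would define
\[
a_1 \star \dots \star a_n := (a_1 \star \dots \star a_{n-1}) \star a_n
\]
whenever the Laplace couplings arising in this outermost $\star$ produce well-defined products of distributions.

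The content to verify is that the element so defined does not depend on the chosen parenthesization. Here I would invoke the associativity theorem applied to any two parenthesizations of the string $a_1 \dots a_n$: by a standard argument (Mac Lane's coherence for associativity, or a direct induction reducing any parenthesization to the canonical left-parenthesized one), any two parenthesizations are related by finitely many applications of the identity $(u \star v) \star w = u \star (v \star w)$, and each such application preserves the value by the theorem, provided all intermediate Laplace products make sense. Thus $a_1 \star \dots \star a_n$ is unambiguous as soon as the Laplace couplings at every stage yield legitimate products of distributions.

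The only genuine obstacle in this corollary is the hypothesis itself: the Laplace couplings in formula (\ref{boostedfirst}) and its iterates involve powers and products of the two-point function $\Delta_+$ pulled back to $M^n$, and these products of distributions are precisely what must make sense. This analytic content is not proved here; it is deferred to the microlocal machinery of Chapters 2--5 (soft landing condition, boundedness in $\mathcal{D}^\prime_\Gamma$, Theorem \ref{productbounded}, stability by sum in Proposition \ref{sumstable}, and the microlocal spectrum condition satisfied by $\Delta_+$ established in Theorem \ref{delta+bounded}). Once these are invoked to guarantee that all the Laplace products $(a_{i_1(1)} \star \dots \star a_{i_k(1)} \vert a_{j_1(1)} \star \dots \star a_{j_l(1)})$ are well-defined distributions with controlled wave front sets, the inductive construction above goes through and the corollary follows as a purely algebraic consequence of associativity.
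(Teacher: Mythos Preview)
Your proposal is correct and is precisely the standard argument the paper leaves implicit: the corollary is stated immediately after the associativity theorem with no separate proof, so the intended derivation is exactly the induction on $n$ reducing any parenthesization to a fixed one via repeated applications of $(u\star v)\star w = u\star(v\star w)$. Your additional discussion of the analytic content of the hypothesis (microlocal conditions, Theorem~\ref{productbounded}, etc.) is accurate commentary but goes beyond what the corollary itself requires, since the statement explicitly assumes the distributional products make sense.
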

\subsection{Wick's property.}
We give a general QFT formula for the star product in the notations of physicists
$$ 
\phi_1^{n_1}(x_1)\star\dots\star \phi_k^{n_k}(x_k)$$
$$=\sum \left(\begin{array}{c}n_1 \\ i_1 \end{array}\right)\dots\left(\begin{array}{c}n_k \\ i_k \end{array}\right)\underset{\text{Distribution on }M^n }{\underbrace{\left\langle 0|
\phi^{n_1-i_1}(x_1)\star\dots \star\phi^{n_k-i_k}(x_k)|0\right\rangle}}\phi^{i_1}(x_1)\dots\phi^{i_k}(x_k).
$$
And we write the Hopf counterpart of this formula
$$
\underline{a}_1\star\dots\star \underline{a}_n=\sum \underset{\text{distributions}}{\underbrace {\varepsilon(\underline{a}_{1(1)}\star\dots\star \underline{a}_{n(1)})}}\underline{a}_{1(2)}\dots\underline{a}_{n(2)}.
$$  

 We introduce a crucial definition which is the Hopf algebra counterpart of the Wick theorem of QFT. We call this property Wick's expansion.
For any finite subsets $I,J$ 
of $\mathbb{N}$,
such that $I\cap J=\emptyset$, 
let $\star$ be any bilinear map
$$\star: \mathcal{H}^I \times \mathcal{H}^J \mapsto \mathcal{H}^{I\cup J}\otimes_{C^\infty(M^{I\cup J})} V^{I\cup J} .$$

\begin{defi}
A bilinear map $\star$
as above satisfies the Wick expansion property if
for $I\cap J=\emptyset$,
$$\forall A=\left(\prod_{i\in I} a_i\right)\in \mathcal{H}^I\otimes_{C^\infty(M^I)} V^I,\forall B=\left(\prod_{j\in J} b_j\right)\in  \mathcal{H}^J\otimes_{C^\infty(M^J)} V^J,$$
\begin{equation}
A\star B=\sum\varepsilon\left(A_{(1)}\star B_{(1)} \right)A_{(2)} B_{(2)}. 
\end{equation}
\end{defi}
This property encodes in the Hopf algebraic language all the algebro combinatorial properties of the Wick theorem.
We prove that our star product defined from the Laplace coupling does indeed satisfy the
Wick property.
\begin{thm}
Let $\star$ be defined by
\begin{equation}
A\star B=\sum\left(A_{(1)}\vert B_{(1)} \right)A_{(2)} B_{(2)}
\end{equation}
where $\left(.\vert. \right) $ denotes the Laplace coupling, then $\star$ satisfies Wick's expansion:
$$\forall A=\prod_{i\in I} (a_i)\in \mathcal{H}^I\otimes_{C^\infty(M^I)} V^I,\forall B=\prod_{j\in J} (b_j)\in  \mathcal{H}^J\otimes_{C^\infty(M^J)} V^J$$
$$A\star B=\sum \varepsilon(A_{(1)}\star B_{(1)})A_{(2)} B_{(2)}$$
\end{thm}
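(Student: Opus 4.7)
The plan is short because the statement is essentially a repackaging of identity (\ref{third}) already established in the previous lemma. The strategy is to start from the definition of $\star$ via the Laplace coupling and simply substitute $(a|b) = \varepsilon(a\star b)$ inside the sum. Concretely, I would begin by noting that by definition
\[
A\star B=\sum\bigl(A_{(1)}\vert B_{(1)}\bigr)\, A_{(2)}B_{(2)},
\]
where the Sweedler components $A_{(1)}\otimes A_{(2)}$ come from the coproduct $\Delta A$ of $A\in \mathcal{H}^I\otimes_{C^\infty(M^I)} V^I$ and similarly for $B$ (recall $\Delta$ acts on products $\prod_{i\in I}a_i$ factor by factor, which is consistent with treating $A$ as a single element of $\mathcal{H}^I$).

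Next I would invoke identity (\ref{third}), namely $\varepsilon(a\star b) = (a\vert b)$, applied pointwise to the pair $(A_{(1)},B_{(1)})$. Substituting this inside the sum replaces each Laplace coupling by the corresponding counit of a $\star$ product, yielding
\[
A\star B=\sum \varepsilon\bigl(A_{(1)}\star B_{(1)}\bigr)\, A_{(2)}B_{(2)},
\]
which is exactly the Wick expansion property to be established.

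There is genuinely no obstacle beyond the algebraic manipulation: the Wick expansion and the Laplace coupling definition of $\star$ are tautologically equivalent once one knows that applying $\varepsilon$ to a $\star$ product extracts the Laplace coupling. The only subtle point worth flagging is that, as always in this chapter, the identities are formal in the sense that they presuppose that the products of distributions arising from iterated Laplace couplings (hidden in $\varepsilon(A_{(1)}\star B_{(1)})$) are well-defined elements of $V^{I\cup J}$; this is an analytic hypothesis, not an algebraic one, and is to be addressed by the microlocal machinery developed in Chapters 2--5 and applied recursively in the remainder of this chapter. Apart from that caveat, the proof is a single line.
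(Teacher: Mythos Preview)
Your proposal is correct and is essentially identical to the paper's proof: both observe that by identity (\ref{third}), $\varepsilon(A_{(1)}\star B_{(1)})=(A_{(1)}\vert B_{(1)})$, so substituting this into the defining formula for $\star$ gives the Wick expansion in one line. (The paper's proof cites (\ref{first}) but the identity actually invoked is (\ref{third}), which you correctly identified.)
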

\begin{proof}
By the identity (\ref{first}), notice that $\varepsilon(A_{(1)}\star B_{(1)})=\left(A_{(1)} |B_{(1)} \right)$ which proves the claim.
\end{proof}
The meaning of this theorem is that any associative product $\star$ constructed by the twisting procedure from the Laplace coupling $(.\vert.)$ should satisfy the Wick expansion property.
\subsection{Recovering Feynman graphs.}
\begin{prop}
For any $\left(p_1,...,p_n\right)$,
$\varepsilon \left(\phi^{p_1}(x_1)\star ... \star \phi^{p_n}(x_n) \right)=$
\begin{equation}
p_1!...p_n!\sum_{\sum_{j=1}^nm_{ij}=p_i}\Pi_{1\leqslant i<j\leqslant n}\frac{\Delta_+^{m_{ij}}(x_i,x_j)}{m_{ij}!},
\end{equation}
where $(m_{ij})_{ij}$ runs over the set of all symmetric matrices with integer entries with vanishing diagonal and such that 
for all $i$, 
the sum of the coefficients 
on the $i$-th row is equal to $p_i$.
\end{prop}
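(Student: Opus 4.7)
The plan is to prove the formula by induction on $n$. The base case $n=1$ reduces to $\varepsilon(\phi^{p_1}(x_1))=\delta_{p_1,0}$: on the right hand side the product over $i<j$ is empty and the row-sum constraint forces $p_1=0$, so both sides agree.

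For the inductive step I would first apply associativity of $\star$ together with identity (\ref{third}) to write
$$\varepsilon\bigl(\phi^{p_1}(x_1)\star\cdots\star\phi^{p_n}(x_n)\bigr)=\bigl(\phi^{p_1}(x_1)\star\cdots\star\phi^{p_{n-1}}(x_{n-1})\,\bigl|\,\phi^{p_n}(x_n)\bigr).$$
Next, I would use the Wick expansion property on the first argument to replace the $\star$-product by a sum of ordinary commutative products. Concretely, applying the coproducts $\Delta\phi^{p_i}(x_i)=\sum_{l_i}\binom{p_i}{l_i}\phi^{p_i-l_i}(x_i)\otimes\phi^{l_i}(x_i)$ and the formula $A\star B=\sum\varepsilon(A_{(1)}\star B_{(1)})\,A_{(2)}B_{(2)}$ iteratively yields
$$\phi^{p_1}(x_1)\star\cdots\star\phi^{p_{n-1}}(x_{n-1})=\sum_{l_1,\dots,l_{n-1}}\Bigl(\prod_i\binom{p_i}{l_i}\Bigr)\varepsilon\Bigl(\mathop{\scalebox{0.9}{$\star$}}_{i=1}^{n-1}\phi^{p_i-l_i}(x_i)\Bigr)\phi^{l_1}(x_1)\cdots\phi^{l_{n-1}}(x_{n-1}).$$

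The central step is then to evaluate the remaining Laplace coupling using the coassociative rule from Definition~\ref{Laplace}, applied with $A^k=\phi^{l_k}(x_k)$ and $B=\phi^{p_n}(x_n)$. This introduces $\Delta^{n-2}\phi^{p_n}(x_n)$, i.e.\ a multinomial decomposition indexed by integers $m_{1n},\dots,m_{n-1,n}$ summing to $p_n$; the two-point pairings $\bigl(\phi^{l_i}(x_i)|\phi^{m_{in}}(x_n)\bigr)=\delta_{l_i,m_{in}}\,l_i!\,\Delta_+^{l_i}(x_i,x_n)$ then force $l_i=m_{in}$ and produce the factor $\prod_{i<n}\Delta_+^{m_{in}}(x_i,x_n)$. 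Calling on the inductive hypothesis for $\varepsilon\bigl(\star_{i=1}^{n-1}\phi^{p_i-m_{in}}(x_i)\bigr)$ supplies a symmetric matrix $(m_{ij})_{1\leq i<j\leq n-1}$ with row sums $p_i-m_{in}$; extending it by the column $(m_{in})_{i<n}$ (and its symmetric row) produces a matrix with the required row sums $p_i$, including the $n$-th row by the multinomial constraint.

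What remains is the combinatorial bookkeeping, which I expect to be the only delicate point: the factor attached to each matrix is
$$\underbrace{\prod_{i<n}\binom{p_i}{m_{in}}(p_i-m_{in})!}_{\text{from Wick + induction}}\cdot\underbrace{\frac{p_n!}{\prod_{i<n}m_{in}!}}_{\text{from }\Delta^{n-2}\phi^{p_n}}\cdot\prod_{i<j\leq n-1}\frac{1}{m_{ij}!},$$
and the identity $\binom{p_i}{m_{in}}(p_i-m_{in})!=p_i!/m_{in}!$ rearranges this to $\prod_{i=1}^n p_i!\cdot\prod_{i<j\leq n}(m_{ij}!)^{-1}$, which is precisely the claimed coefficient. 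The hard part is purely the matching of indices: ensuring that the Kronecker deltas collapse correctly, that the matrix obtained by adjoining the new column is still symmetric with vanishing diagonal, and that the multinomial from $\Delta^{n-2}\phi^{p_n}$ exactly accounts for the $n$-th row sum $p_n$. Everything else is a direct substitution.
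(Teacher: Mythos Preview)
Your approach is essentially identical to the paper's: induction on $n$, the identity $\varepsilon(A\star B)=(A\vert B)$, the Wick/coproduct expansion of the first factor, and the coassociative Laplace coupling to peel off the last point. The paper starts the induction at $n=2$ rather than $n=1$, but this is cosmetic.

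There is one bookkeeping slip in your displayed coefficient. The two-point pairing $(\phi^{l_i}(x_i)\vert\phi^{m_{in}}(x_n))=\delta_{l_i,m_{in}}\,m_{in}!\,\Delta_+^{m_{in}}(x_i,x_n)$ contributes a factor $\prod_{i<n}m_{in}!$ that you drop when you say it ``produces the factor $\prod_{i<n}\Delta_+^{m_{in}}(x_i,x_n)$''. As written, your displayed product simplifies to $\prod_i p_i!\cdot\prod_{i<n}(m_{in}!)^{-2}\cdot\prod_{i<j<n}(m_{ij}!)^{-1}$, not the claimed $\prod_i p_i!\cdot\prod_{i<j\leq n}(m_{ij}!)^{-1}$. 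Reinstating the missing $\prod_{i<n}m_{in}!$ cancels one power of $m_{in}!$ against the multinomial denominator and gives the correct result.
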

Note that $(m_{ij})_{ij}$ should be interpreted as the adjacency matrix of a Feynman graph.
\begin{proof}
% To each operator $a^j=\phi^{p_j}(x_j)$, we associate a vertex decorated with the index $j$.
The sum is indexed by symmetric matrices with integer coefficicients vanishing diagonals. 
We will prove the theorem by recursion. 
We start by checking the formula at degree $2$.
$$\varepsilon\left(\phi^{p_1}(x_1)\star\phi^{p_2}(x_2)\right)=\left(\phi^{p_1}(x_1)\vert \phi^{p_2}(x_2) \right)=p_1!\delta_{p_1p_2}\Delta_+^{p_1}(x_1,x_2)$$ 
$$=p_1!p_2!\sum_{p_{12}=p_1=p_2}\frac{\Delta_+^{p_{12}}(x_1,x_2)}{p_{12}!}.$$
Assume we know that $\frac{\varepsilon \left(\phi^{p_1}(x_1)\star ... \star \phi^{p_k}(x_k) \right)}{p_1!...p_k!}
=\sum_{\sum_{j=1}^km_{ij}=p_i}\Pi_{1\leqslant i<j\leqslant k}\frac{\Delta_+^{m_{ij}}(x_i,x_j)}{m_{ij}!} $ is true for any $k\leqslant n$.
Set $A=\left(a^1\star...\star a^{n} \right)$ and $B=a^{n+1}$.
We use the identity $\varepsilon(A\star B)=\left(A\vert B\right)=\sum \varepsilon(A_{(1)})\left( A_{(2)}\vert B\right)$.
We use the explicit formula for the coproduct of quantum fields
$$\Delta \phi^{p_j}(x_j)=
\sum_{0\leqslant i_j\leqslant p_j}
\left(\begin{array}{c} p_j \\ i_j \end{array} \right) \phi^{i_j}(x_j)\otimes \phi^{p_j-i_j}(x_j)$$ and
$$\Delta^n \phi^{p_{n+1}}(x_{n+1})=\sum_{i_1+...+i_n=p_{n+1}}
\left(\begin{array}
{ccc} & p_{n+1} & \\ i_1 & ... & i_n
\end{array}\right)
\phi^{i_1}(x_{n+1})\otimes\dots\otimes\phi^{i_n}(x_{n+1}) $$
to deduce
$$\varepsilon(A_{(1)})=\left(\begin{array}{c} p_1 \\ i_1 \end{array} \right)\dots\left(\begin{array}{c} p_n \\ i_n \end{array} \right) \varepsilon\left(\phi^{p_1-i_1}(x_1)\star\dots\star \phi^{p_n-i_n}(x_n)\right) $$
$$\left(A_{(2)}\vert B \right)=\left(A_{(2)}\vert \phi^{p_{n+1}}(x_{n+1}) \right) $$ 
$$=\left(\begin{array}
{ccc} & p_{n+1} & \\ i_1 & \dots & i_n
\end{array}\right) \frac{\Delta_+^{i_1}(x_1,x_{n+1})}{i_1!}\dots\frac{\Delta_+^{i_n}(x_n,x_{n+1})}{i_n!}$$ 
$$=p_{n+1}! \Delta_+^{i_1}(x_1,x_{n+1})\dots\Delta_+^{i_n}(x_n,x_{n+1}) .$$
Each term $\varepsilon(A_{(1)})\left(A_{(2)}\vert B \right)$ has the form:
$$p_1!...p_{n+1}!\frac{\varepsilon\left(\phi^{p_1-i_1}(x_1)\star\dots\star \phi^{p_n-i_n}(x_n)\right)}{(p_1-i_1)!\dots(p_n-i_n)!}\frac{\Delta_+^{i_1}(x_1,x_{n+1})}{i_1!}\dots\frac{\Delta_+^{i_n}(x_n,x_{n+1})}{i_n!}, $$
which ends our proof because the product $(p_1-i_1)!\dots(p_n-i_n)!$ in the denominator kill the unwanted factors. The space of $n+1\times n+1$ symmetric matrices with fixed last row with coefficients $i_1,...,i_k$ and such that the sum of terms on the $k$-th line is equal to $p_k$ is in bijection with the space of $n\times n$ symmetric matrices with sum of $k-th$ line equals $p_k-i_k$.    
\end{proof}
\paragraph{A word of caution and an introduction to the next section.}
From now on, the star product is \textbf{fixed} and is defined as above from the ``twisting procedure'' with the Laplace coupling defined by 
the Wightman propagator $\Delta_+$.
However, 
we \emph{have not yet defined rigorously} 
the product $\star$ for elements 
$$(A,B)\in \left(\mathcal{H}^I\otimes_{C^\infty(M^I)}V^I\right)\times\left( \mathcal{H}^J\otimes_{C^\infty(M^J)}V^J\right)$$
with \textbf{distributional coefficients}. 
We will construct a time ordered product $T$
from $\star$ and
we will prove that
$T(AB)$ is well defined
in the distributional sense. 
This is illustrated by one of our previous example:
\begin{ex}
$$\phi^2(x_1)\star (\phi(x_2)\phi(x_3))=\phi^2(x_1)\phi(x_2)\phi(x_3)
+2\phi(x_1)\Delta(x_1,x_2)\phi(x_3)+2\phi(x_1)\Delta(x_1,x_3)\phi(x_2) $$ $$ +2\underset{\text{product of  distributions}}{\underbrace{\Delta(x_1,x_2)\Delta(x_1,x_3)}}  $$
\end{ex}
In the next section, we are going to use $\star$ to define the time ordered product $T$ which \textbf{only satisfies} 
the Wick expansion property $T(A)=\sum t(A_{(1)})A_{(2)}$ 
and the causality equation.
\section{The causality equation.} 
\paragraph{The geometrical lemma}
The geometrical lemma 
(due to Popineau and Stora \cite{Popineau}) 
essentially states 
that we can partition 
the configuration space minus the thin diagonal 
$M^n\setminus d_n$,
with open sets having nice properties from the point of view of causality.

\begin{lemm}\label{geomlemm}
Let $(M,\geqslant)$ be a causal 
Lorentzian manifold endowed 
with the canonical \emph{poset} structure (i.e. a set equipped with
a partial order) induced by the Lorentzian metric
and the chronological causality on $
M$: $x \leqslant y$ 
if $y$ lies in the future cone of $x$.
Define the relation $\nleqslant$
by: $x \nleqslant y$ if and only if $x \leqslant y$ does not hold
Let $[n]=\{1,\dots,n\}$ and $I$
a proper subset of $[n]$.
If $I^c$
is the complement of $I$ in $[n]$ (i.e. $I\sqcup I^c=[n]$),
we define the subset $M_{I,I^c}$ of
$M^n$ by
\begin{eqnarray*}
M_{I,I^c} = \{ (x_1,\dots,x_n)\in M^n |\forall (i,j)\in I\times I^c, x_i\nleqslant x_j\}.
\end{eqnarray*}
Then, 
\begin{eqnarray}
\bigcup_I M_{I,I^c} &=& M^n \backslash d_n,
\label{geolemma}
\end{eqnarray}
where $d_n=\{x_1=\dots=x_n\}$ is the thin diagonal of $M^n$
and $I$ runs over the proper subsets of $[n]$.
\end{lemm}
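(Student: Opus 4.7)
The plan is to prove the two inclusions in \eqref{geolemma} separately. The inclusion $\bigcup_I M_{I,I^c}\subseteq M^n\setminus d_n$ is immediate: if $(x_1,\ldots,x_n)\in M_{I,I^c}$ for some proper nonempty $I\subsetneq[n]$, pick any $i\in I$ and $j\in I^c$ (both nonempty since $I$ is proper); the defining condition $x_i\nleqslant x_j$ forbids $x_i=x_j$, because reflexivity would otherwise give $x_i\leqslant x_j$. Hence $(x_1,\ldots,x_n)\notin d_n$.

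The substance lies in the reverse inclusion. My idea is to pull the causal order on $M$ back to the finite index set $[n]$ and run a purely order-theoretic argument there. Given $(x_1,\ldots,x_n)\notin d_n$, I would introduce the preorder $\preceq$ on $[n]$ defined by $i\preceq j$ if and only if $x_i\leqslant x_j$. Unpacking the definition, the membership $(x_1,\ldots,x_n)\in M_{I,I^c}$ is precisely the statement that $I$ is \emph{upward-closed} for $\preceq$: no index of $I$ admits a $\preceq$-successor lying in $I^c$. The task thus reduces to producing a proper nonempty upward-closed subset of $[n]$.

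To obtain one, I would pass to a genuine partial order by quotienting out ties. Because $(M,\leqslant)$ is causal, $\leqslant$ is antisymmetric on $M$, so the relation $i\sim j\Longleftrightarrow x_i=x_j$ collapses $\preceq$ to a bona fide partial order on the finite quotient $P:=[n]/{\sim}$. The hypothesis $(x_1,\ldots,x_n)\notin d_n$ translates precisely into $|P|\geqslant 2$. I would then pick any maximal class $[i^*]\in P$ (which exists because $P$ is finite and nonempty) and set $I:=\{j\in[n]\,:\,x_j=x_{i^*}\}$. Then $I$ is nonempty and proper, and maximality of $[i^*]$ yields upward-closedness: if $j\in I$ and $x_j\leqslant x_k$, then $[i^*]=[j]\preceq[k]$, whence $[k]=[i^*]$ and $k\in I$. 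This exhibits $(x_1,\ldots,x_n)\in M_{I,I^c}$, proving the claim.

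No analytical machinery enters; the only delicate point, which is the main (and very mild) obstacle, is the bookkeeping for tuples where several indices designate the same spacetime point. This is exactly what the quotient $P$ takes care of, and the whole argument uses only the antisymmetry of the causal order on $M$ together with finiteness of $[n]$.
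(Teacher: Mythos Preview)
Your proof is correct and follows essentially the same approach as the paper: both arguments collapse indices with coincident spacetime points to obtain a finite partially ordered set with at least two elements, then select a maximal element and let $I$ be the preimage of that element. The paper phrases this in terms of the Hasse diagram of the set $\{x_1,\dots,x_n\}\subset M$, while you use the quotient-poset language $P=[n]/{\sim}$; these are the same construction, and your formulation is arguably cleaner in its explicit identification of ``$(x_1,\dots,x_n)\in M_{I,I^c}$'' with ``$I$ is upward-closed for $\preceq$''.
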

\begin{proof}
It is clear that, for all proper subsets $I$ of $[n]$,
we have $M_{I,I^c} \subset M^n\backslash d_n$, because
if $(x_1,\dots,x_n)\in d_n$, then $x_i \ge x_j$ for
all $i$ and $j$ in $[n]$. It remains to show that any
$X=(x_1,\dots,x_n)\in  M^n\backslash d_n$ belongs to
some $M_{I,I^c}$. In fact we shall determine all the
$M_{I,I^c}$ to which a given $X$ belongs.
For all $X=(x_1,\cdots,x_n)\in M^n$,
we define
$\lambda(X)$ 
as the finite subset $\{a_1,\cdots,a_r\}\subset M$
s.t. $a\in \lambda(X)$ iff $\exists i\in [n], x_i=a$. 
To each $X \in  M^n$, 
we associate
a directed graph known 
as the \emph{Hasse diagram} of
$X$ as follows.
To each distinct $a\in \lambda(X)$, 
we associate
a vertex and we draw a directed line from vertex 
$a$
to vertex $b$ 
if $a \leqslant b$, 
$a\neq b$
and no
other $c\in \lambda(X)$, 
distinct from $a$ and $b$,
is such that $a\leqslant c\leqslant b$.
All indices $i\in [n]$
such that $x_i=a$ decorate
the same vertex.
The Hasse diagram of $X$ has a single vertex 
if and only
if $X\in d_n$. 
Take $X\in M^n\backslash d_n$, 
its Hasse diagram has 
at least two vertices.
If we pick up any vertex of the Hasse diagram, then any
point $x_j$ greater than a point $x_i$ 
of this vertex 
is such that $x_j\ge x_i$. Thus, $j\in I$ if $i\in I$
and, to build a $M_{I,I^c}$, we can select a non-zero number of
vertices of the diagram and add all the vertices that are
greater than the selected ones. 
The points corresponding
to all these vertices determine a subset $I$ of $[n]$.
If $I\not=[n]$, then $X\in M_{I,I^c}$ and it is always possible
to find such a $I$ by picking up a single maximal vertex
in one connected component of the Hasse diagram. 
Conversely,
any $M_{I,I^c}$ is made of the points that are
greater than their minima. 
To see this, consider a point
$x_i\in M_{I,I^c}$ such that $i\in I$. Then, the set
$S_i=\{x_j\in X | x_i \ge x_j\}$ is not empty
because $x_i$ belongs to it. 
Then, $x_i$ is larger
than a minimum of $S_i$, which is also a minimum of the
Hasse diagram of $X$.
\end{proof}
\begin{figure} %on ouvre l'environnement figure
\begin{center}
\includegraphics[width=12cm]{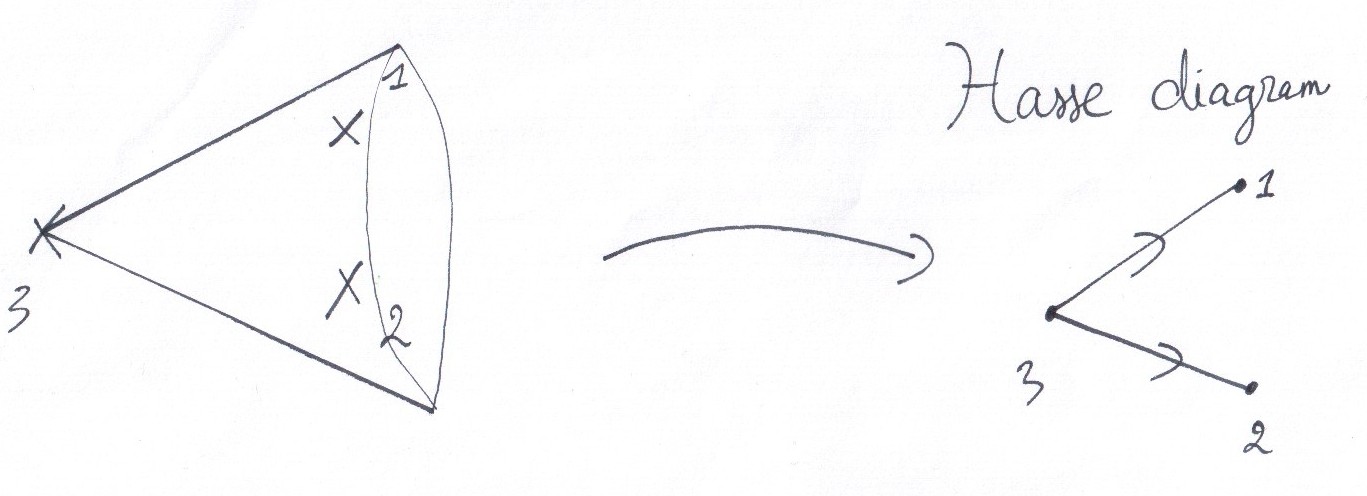} %ou image.png, .jpeg etc.
\caption{A configuration of three points in $C_{\{12\}}\subset M^3$ and the corresponding Hasse diagram.} %la légende
\label{Soften_Stora} %l'étiquette pour faire référence à cette image
\end{center}
\end{figure} %on ferme l'environnement figure 

\subsection{Definition of the time-ordering operator}
In quantum field theory, the poset 
is the Lorentzian manifold $M$ and the fields are,
for example, $\phi^n(x)$.
For any finite subset $I$
of $\mathbb{N}$, 
we defined the 
configuration 
space $M^I$ 
as the set of maps 
from $I$ to $M$ 
and
we introduced
some
vector space 
of distributions 
$V^I$ which contains 
the singularities
of the Feynman amplitudes,
then we introduced
a module $\mathcal{H}^I\otimes_{C^\infty(M^I)}V^I$ 
associated to $I$. 
For all $A\in\mathcal{H}^I$, 
we will denote by $t_I(A)$ 
the element 
$\varepsilon\left( T_I(A)\right)$
and $t:\mathcal{H}^I\mapsto V^I$.
%The only condition required for this product is that
%$A\cdot B=B\cdot A$ is the supports of $A$ and $B$
%are not comparable (i.e. for any $x$ in the support of $A$
%and $y$ in the support of $B$ we have
%$x\ngeqslant y$ and $y\ngeqslant x$).
\paragraph{Axioms for the time ordering operator.}
We are going to define the time-ordering operator
as a collection $(T_I)_I$ of $C^\infty(M^I)$-module morphisms, for all finite subset $I$ of $\mathbb{N}$,
$T_I: \mathcal{H}^I\rightarrow \mathcal{H}^I\otimes_{C^\infty(M^I)}V^I$ 
with the following properties: 
\begin{enumerate}\label{AxiomsTproducts}
\item If $\vert I\vert\leqslant 1,$ the restriction of $T$
to $\mathcal{H}^I$ is the identity map,
\item $T$ 
satisfies the Wick expansion property:
\begin{equation}\label{Wickexp}
T(A)=\sum \varepsilon\circ T(A_{(1)})A_{(2)} 
\end{equation}  
\item The causality equation. Let $A=a_1(x_1)\dots a_n(x_n)\in \mathcal{H}^n$. If there is
a proper subset $I\subset \{1,\dots,n\}$ such that
$x_i \nleqslant x_j$ for $i\in I$ and $j \notin I$, 
denote $A_I=\prod_{i\in I} a_i(x_i)$  and $A_{I^c}=\prod_{j\in I^c} a_j(x_j)$
then 
\begin{equation}\label{causality}  
T(A)=T(A_I)\star T(A_{I^c}). 
\end{equation}  
\end{enumerate}
\paragraph{Remark:}
The equation 
$T(A)=\sum t(A_{(1)})A_{(2)}$ 
implies $T$ is a comodule morphism, 
we denote by $\beta$ the 
coaction defined as follows:
$$\forall (f,A)\in V^I\times\mathcal{H}^I, 
\beta(f\otimes A)=\sum \left(fA_1\otimes A_2\right)=\sum \left(A_1\otimes fA_2\right).$$
$$\beta T(A)= \sum t(A_{(1)})A_{(21)}\otimes A_{(22)}=\sum t(A_{(1)})A_{(2)}\otimes A_{(3)}$$ 
$$=\sum t(A_{(11)})A_{(12)}\otimes A_{(2)}=\sum T(A_{(1)})A_{(2)}=\left(T\otimes Id \right)\beta A .$$
In fact, C Brouder communicated 
to us a proof of $T(A)=\sum t(A_{(1)})A_{(2)}\Leftrightarrow $ 
$T$ is a comodule morphism.
\paragraph{What are we trying to construct ?}
We have a given star product which is the operator product of quantum fields.
The idea is to construct 
all time ordered products 
satisfying the previous set of axioms, 
the most important being causality and the Wick expansion property.
The $T$ product is not unique, actually there are infinitely many $T$-products and there is an infinite dimensional group which acts freely and transitively on the space of all $T$-products
(see equation (4.1) p.~17 in \cite{BrouderQFT}). 
This group is the Bogoliubov renormalization group
which 
was studied
in Hopf algebraic 
terms by C. Brouder 
in \cite{BrouderQFT} p.~17-20.
in \cite{Borcherds}
The problem of construction of a QFT in our sense is reduced to the problem of constructing a $T$-product satisfying the axioms and to \textbf{make sense analytically} of this $T$-product.
We will prove the existence of at least one $T$-product and we will show 
that it is analytically well defined. A crucial ingredient in the existence proof is to establish a recursion equation which expresses the $T$ product $T_n\in Hom(\mathcal{H}^n,\mathcal{H}^n)$ in terms of the elements $T_I\in Hom(\mathcal{H}^I,\mathcal{H}^I)$ for $I\varsubsetneq \{1,\dots,n\} $. We will later see that
the problem of defining the $T$-product reduces to a problem of making sense of \textbf{products of distributions} and a problem of \textbf{extension of distributions}.  
Our approach is related to the one of \cite{Borcherds} but we use 
causality in a more explicit way following Epstein--Glaser. 
However, 
the strategy 
we will adopt 
make essential 
use of ideas of 
Raymond Stora 
which appeared in unpublished form (\cite{Popineau}).   
\subsection{The Causality theorem.}
We give the main structure theorem for the amplitudes coming from perturbative QFT.
This theorem relates $T_n$ and all $(T_I)_I$ for $I\varsubsetneq \{1,\dots,n\}$ on the configuration space minus the thin diagonal $M^n\setminus d_n$.
\begin{thm}
Let $T$ be a collection $(T_I)_I$ of $C^\infty(M^I)$-module morphisms $T_I: \mathcal{H}^I \rightarrow \mathcal{H}^I\otimes_{C^\infty(M^I)}V^I$ which satisfy the collection of axioms (\ref{AxiomsTproducts}).
Then for all $I\varsubsetneq\{1,...,n\}$,  $t=\varepsilon\circ T$ 
satisfies the equation:
\begin{equation}\label{key formula}
t(A)=\sum t(A_{I(1)})t(A_{I^c(1)})\left( A_{I(2)} \vert A_{I^c(2)}\right)
\end{equation} 
$\textbf{on }M_{I,I^c}$. 
We call this equation 
the Hopf algebraic equation 
of causality.
\end{thm}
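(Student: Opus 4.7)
The plan is to chain the three axioms together and reduce everything to Hopf-algebraic bookkeeping. First, I would observe that the hypothesis defining $M_{I,I^c}$, namely $x_i \nleqslant x_j$ for $i \in I$ and $j \in I^c$, is precisely the condition appearing in the causality axiom (\ref{causality}). Therefore, restricting to $M_{I,I^c}$, we get the identity
\begin{equation*}
T(A) \;=\; T(A_I) \star T(A_{I^c}).
\end{equation*}
This is the only place where causality is used; from here everything is algebra.

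Next, I would apply the Wick expansion axiom (\ref{Wickexp}) to each factor separately, writing $T(A_I) = \sum t(A_{I(1)}) A_{I(2)}$ and similarly for $T(A_{I^c})$. Using the $\star$-multiplication rule $(uP)\star(vQ) = (p^{*}u)(p^{*}v)(P\star Q)$ for distributional coefficients $u,v$ and pure fields $P,Q$, the scalar coefficients $t(A_{I(1)}),t(A_{I^c(1)})$ come out of the $\star$-product and we obtain
\begin{equation*}
T(A) \;=\; \sum t(A_{I(1)})\,t(A_{I^c(1)})\,\bigl( A_{I(2)} \star A_{I^c(2)} \bigr).
\end{equation*}
Then I would expand the remaining $\star$-product via its definition in terms of the Laplace coupling: $A_{I(2)} \star A_{I^c(2)} = \sum (A_{I(2)(1)} \vert A_{I^c(2)(1)})\, A_{I(2)(2)} A_{I^c(2)(2)}$.

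Finally, I would apply the counit $\varepsilon$ to both sides. Since $\varepsilon$ is an algebra morphism, $\varepsilon(A_{I(2)(2)} A_{I^c(2)(2)}) = \varepsilon(A_{I(2)(2)})\,\varepsilon(A_{I^c(2)(2)})$. The resulting expression is
\begin{equation*}
t(A) \;=\; \sum t(A_{I(1)})\,t(A_{I^c(1)})\,\bigl(A_{I(2)(1)} \vert A_{I^c(2)(1)}\bigr)\,\varepsilon(A_{I(2)(2)})\,\varepsilon(A_{I^c(2)(2)}).
\end{equation*}
Using bilinearity of the Laplace coupling and the counit identity $\sum B_{(1)}\,\varepsilon(B_{(2)}) = B$ on both $B = A_{I(2)}$ and $B = A_{I^c(2)}$ (after invoking coassociativity to regroup the iterated coproducts), the inner sum collapses to $(A_{I(2)}\vert A_{I^c(2)})$, giving precisely the announced formula (\ref{key formula}).

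There is no real conceptual obstacle here: the proof is a purely formal manipulation combining the three axioms with the standard Sweedler-notation identities. The only point that requires a word of caution — and which the author already flagged at the end of the previous section — is the analytic meaningfulness of the products $t(A_{I(1)})\,t(A_{I^c(1)})\,(A_{I(2)}\vert A_{I^c(2)})$ of distributional coefficients. This however is not an issue for the algebraic identity on $M_{I,I^c}$: the causality hypothesis guarantees a suitable disjointness of the causal supports, and the microlocal product bounds developed in Chapter $4$ (Theorem \ref{productbounded}) together with the soft-landing stability under sum (Proposition \ref{sumstable}) ensure that these products are well-defined distributions on the open set $M_{I,I^c}$, which is all the statement requires.
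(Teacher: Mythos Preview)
Your proof is correct and follows essentially the same route as the paper. The only difference is cosmetic: after applying causality, the paper immediately applies $\varepsilon$ and invokes the already-established identity $\varepsilon(a\star b)=(a\vert b)$ (equation~(\ref{third})), whereas you expand the $\star$-product by hand and collapse it with the counit identity afterward --- which amounts to re-deriving (\ref{third}) inline.
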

\begin{proof}
By definition $t=\varepsilon\circ T$, 
$$t(A)=\varepsilon(T(A))
=\varepsilon(T(A_I(x_i)_{i\in I}A_{I^c}(x_i)_{i\in I^c}))$$
$$=\varepsilon(T(A_I) \star T(A_{I^c}))$$ 
because of the causality equation (\ref{causality})
$$t(A)=(T(A_I) \vert T(A_{I^c}))=\sum\left( t(A_{I(1)})A_{I(2)} \vert t(A_{I^c(1)})A_{I^c(2)}\right)$$
because by Wick expansion property (\ref{Wickexp}) $T(A_I)=\sum t(A_{I(1)})A_{I(2)}  $ and  $T(A_{I^c})=t(A_{I^c(1)})A_{I^c(2)}$,
$$t(A)=\sum t(A_{I(1)})t(A_{I^c(1)})\left( A_{I(2)} \vert A_{I^c(2)}\right).$$
\end{proof}
We notice some important facts:
first,
in
Borcherds,
the equation 
\begin{equation}\label{Gaussiancondition}
t(A)=\sum t(A_{I(1)})t(A_{I^c(1)})\left( A_{I(2)} \vert A_{I^c(2)}\right)
\end{equation}
is
called 
the Gaussian
condition
for the Feynman
measure $t$
(Borcherds calls it 
$\omega$),
secondly 
beware that the
above product
is not a priori 
well defined since 
it is a product of distributions. 
Secondly, this theorem says 
that the $T$-product satisfying the axioms
\ref{AxiomsTproducts} \textbf{is not even 
well defined} 
on $d_n$. 
It is only well defined on each $M_{I,I^c}$ 
thus on $M^n\setminus d_n$ because of  
Stora's geometrical Lemma 
(\ref{geomlemm}).
To explain the meaning of the causality equation, we shall quote Borcherds where we changed 
his notation to adapt 
to our case (and 
also inserted some comments): 
``We explain what is going on in this definition. We would like to define the value
of the Feynman measure $t$ to be a sum over Feynman diagrams, formed by joining
up pairs of fields in all possible ways by lines, and then assigning a propagator
to each line and taking the product of all propagators of a diagram. 
This does
not work because of ultraviolet divergences: products of propagators need not be
defined when points coincide. If these products were defined then they would
satisfy the Gaussian condition \ref{Gaussiancondition}, which then says roughly that if the set of vertices $\{1,\dots,n\}$ are divided into two disjoint subsets $I$ and $I^c$,
then a Feynman diagram can be divided
into a subdiagram with vertices $I$, a subdiagram with vertices $I^c$, and some lines
between $I$ and $I^c$. The value $t(A_IA_{I^c})$ of the Feynman diagram would then be the
product of its value $t_I(A_{I(1)})$ on $I$, 
the product 
$\left( A_{I(2)} \vert A_{I^c(2)}\right)$ 
of all the propagators of
lines joining $I$ and $I^c$, and its value $t_{I^c}(A_{I^c(1)})$ on $I^c$.
 The Gaussian condition \ref{Gaussiancondition} need not
make sense if some point of $I$ is equal to some point of $I^c$ because if these points are
joined by a line then the corresponding propagator may have a bad singularity 
[\emph{however this never happens in the domain }$M_{I,I^c}$ \emph{defined in the geometrical lemma}], but
does make sense whenever all points of $I$ are not $\leqslant$ to all points of $I^c$ 
[\emph{this is exactly the definition of the domain} $M_{I,I^c}$].
The definition
above says that a Feynman measure should at least satisfy the Gaussian condition
in this case, when the product is well defined.''
The explanations of Borcherds show that
the geometrical lemma 
gives a very convenient way of
covering $M^n\setminus d_n$ by the sets $M_{I,I^c}$.
\subsection{Consistency condition}

The collection $(M_{I,I^c})_I$ 
forms an open cover of $M^n\setminus d_n$, 
thus there are open domains 
in which a given $M_{I,I^c}$ 
will overlap with a given $C_J$ 
and we must prove the causality equations give the same result on overlapping domains, which justify an eventual gluing by partitions of unity. 
We must check
a sheaf consistency condition:
if $I_1$ and $I_2$ are proper subsets of $\{1,\dots,n\}$
such that $C=C_{I_1} \cap C_{I_2}$ is not empty,
then $T_{I_1}|_C=T_{I_2}|_C$.
Let $u=v_1 w_1$ be the factorization of $u$
corresponding to $I_1$ and $u=v_2 w_2$ the one
corresponding to $I_2$.
We define on $C$
\begin{eqnarray*}
\zdd &=& \prod_{k\in I_1 \cap I_2} a^k(x_k),\\
\zcd &=& \prod_{k\in I_1^c \cap I_2} a^k(x_k),\\
\zdc &=& \prod_{k\in I_1 \cap I_2^c} a^k(x_k),\\
\zcc &=& \prod_{k\in I_1^c \cap I_2^c} a^k(x_k).
\end{eqnarray*}
Therefore, $v_1= \zdd \zdc$, $v_2=\zdd \zcd$,
$w_1=\zcd\zcc$ and $w_2=\zdc\zcc$.
We have
\begin{eqnarray*}
T_{I_1}|_C (u) &=& T(v_1)\cdot T(w_1)
= T(\zdd \zdc)\cdot T(\zcd\zcc).
\end{eqnarray*}
By definition of $C_{I_2}$ we
have $\zdc \ngeqslant \zdd$
and $\zcc \ngeqslant \zcd$, so that
\begin{eqnarray*}
T_{I_1}|_C (u) &=& T(v_1)\cdot T(w_1)
= T(\zdd)\cdot T(\zdc)\cdot T(\zcd)\cdot
T(\zcc).
\end{eqnarray*}
The indices $k$ of $\zcd$ are in $I_1^c$ and those
of $\zdc$ are in $I_1$, thus
$\zcd \ngeqslant \zdc$. On the other hand, the
indices $k$ of $\zcd$ are in $I_2$ and those
of $\zdc$ are in $I_2^c$, thus
$\zdc \ngeqslant \zcd$. In other words
$\zcd \sim \zdc$ so that $T(\zdc)$
and $T(\zcd)$ commute. Therefore,
\begin{eqnarray*}
T_{I_1}|_C (u) &=&
T(\zdd)\cdot T(\zdc)\cdot T(\zcd)\cdot
T(\zcc)
=
T(\zdd)\cdot T(\zcd)\cdot T(\zdc)\cdot
T(\zcc)
\\&=&
T(\zdd \zcd)\cdot T(\zdc\zcc)=
T(v_2)\cdot T(w_2)=
T_{I_2}|_C (u).
\end{eqnarray*}
So we have defined distributions $T_I(u)$ on each $M_{I,I^c}$ in a consistent
way.
We must now show that these $T_I(u)$ extend to a distribution
$T$ on $M^n\backslash D_n$.
If the test function $f$ has its support in $M_{I,I^c}$,
we can define $T(u(f))= T_I(u(f))$. However,
for a test function with a support not included
in a single $M_{I,I^c}$, we need to patch different
$T_I$. To do this we shall use a smooth partition of unity
subordinate to $M_{I,I^c}$.
\section{The geometrical 
lemma for curved space time.}
In this part, 
we need to improve the geometrical 
lemma due to Stora. Why is the geometrical lemma
not enough ? 
We first notice 
that the functions
$\chi_I$ from 
the partition of unity $(\chi_I)_I$ 
subordinate to the open cover $(M_{I,I^c})_I$
of $M^n\setminus d_n$ 
given by the 
geometrical
lemma (\ref{geomlemm})
are \textbf{smooth} 
in $M^n\setminus d_n$ but \textbf{are not smooth} in $M^n$. 
However, we will see
(see formulas \ref{key formula simplified},\ref{key formula}) 
that we are supposed
to multiply $\chi_I$ with some
product of distributions
$t_It_{I^c}\prod \Delta_+^{m_{ij}}$
on $M^n\setminus d_n$, extend
it on $M^n$
and control the wave front set of the extension
$\overline{\chi_It_It_{I^c}\prod \Delta_+^{m_{ij}}}$.
Hence, in order to control the wave front set
of the extension,
we must show
that $\chi_I$
is weakly microlocally bounded
for some $s$. Otherwise
if $\chi_I$ was badly behaving near $d_n$,
we would not be able to control the wave front set
of the extension $\overline{t_n}$!
Actually, 
we explicitely prove 
that for each point 
$\left(x,\dots,x\right)\in d_n$, 
there is a neighborhood 
$U^n$ of $(x,\dots,x)$ in $M^n$ 
where we can construct 
$\chi_I\in C^\infty(U^n)$ 
homogeneous of degree $0$ with respect to 
some specific Euler vector field $\rho$. 
$\chi_I$ is thus scale invariant which implies 
$\forall \lambda\in(0,1],\chi_{I,\lambda}=\chi_I$ 
which means 
that the family $(\chi_{I\lambda})_\lambda$ 
is bounded in 
$C^\infty(U^n\setminus d_n)$ 
hence in 
$D_{\emptyset}^\prime(U^n\setminus d_n)$.
We need these refined properties 
on $(\chi_I)_I$ since
we will have to control
the wave front set of
products of distributions
with these functions $\chi_I$.
\begin{lemm}\label{newStoralemma}
Let $(M_{I,I^c})_I$ 
be the open cover of $M^n\setminus d_n$ 
given by the geometrical lemma
\ref{geomlemm}.
Then there exists a refinement 
$(\tilde{M}_{I,I^c})_I$ of this cover 
and a subordinate 
partition of unity $(\chi_I)_I$ 
where for each 
$I,\chi_I\in C^\infty(M^n\setminus d_n)\bigcap L^1_{loc}(M^n)$ 
and for any Euler vector field $\rho$, 
$e^{\rho\log\lambda*}\left(\chi_I\right)_{\lambda\in(0,1]}$
is a bounded family 
in $\mathcal{D}^\prime_{\emptyset}(M^n\setminus d_n)$.
\end{lemm}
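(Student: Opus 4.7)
The plan is to work locally near the thin diagonal $d_n$ and construct the $\chi_I$ as functions homogeneous of degree $0$ with respect to some canonical Euler vector field. First I would cover $M^n$ by open sets of two types: a single open set $M^n\setminus\overline{\mathcal{V}}$, where $\mathcal{V}$ is a narrow tubular neighborhood of $d_n$, on which the classical Stora partition from Lemma \ref{geomlemm} is already a smooth partition of unity, and a locally finite family of charts $U_p^n$ around each $p=(x_0,\ldots,x_0)\in d_n$ obtained by choosing a Riemannian normal coordinate chart at $x_0$ on $M$ and taking its $n$-th power. In each $U_p^n$ I would introduce coordinates $(X,h_1,\ldots,h_{n-1})$ with $X=x_1$ and $h_i=x_{i+1}-x_1$; the diagonal becomes $\{h=0\}$ and $\rho_p=\sum_{i,\mu}h_i^\mu\partial_{h_i^\mu}$ is a canonical Euler vector field transverse to it.

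In $U_p^n$ the causal structure of $M$ coincides, to leading order in $|h|$, with the Minkowski structure on $T_{x_0}M$, because Christoffel symbols vanish at the origin of normal coordinates. Accordingly I would introduce the refined sets $\tilde{M}_{I,I^c}\cap U_p^n$ defined by replacing the curved causal order in the definition of $M_{I,I^c}$ with the linearized Minkowski order on the coordinate differences; these are $\rho_p$-invariant open cones in $h$-space contained in $M_{I,I^c}$, and they still cover $U_p^n\setminus d_n$ by the Hasse-diagram argument of Lemma \ref{geomlemm} applied to the linearized order. I would then pick a smooth partition of unity $(\psi_I^{(p)})_I$ on the unit sphere $\{|h|=1\}$ subordinate to the traces of the $\tilde{M}_{I,I^c}$ on this sphere, and set $\chi_I^{(p)}(X,h):=\psi_I^{(p)}(h/|h|)$. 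By construction $\chi_I^{(p)}\in C^\infty(U_p^n\setminus d_n)$, takes values in $[0,1]$, and satisfies $(e^{\log\lambda\rho_p})^\star\chi_I^{(p)}=\chi_I^{(p)}$ for every $\lambda\in(0,1]$. Finally I would glue via a smooth partition of unity $(\eta_\alpha)$ on $M^n$ subordinate to the cover $\{M^n\setminus\overline{\mathcal{V}}\}\cup\{U_p^n\}_p$ and define $\chi_I=\eta_0\,\chi_I^{\mathrm{Stora}}+\sum_p\eta_p\,\chi_I^{(p)}$.

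The property $\chi_I\in L^1_{loc}(M^n)$ is immediate from $0\leqslant\chi_I\leqslant 1$. For boundedness under an arbitrary Euler vector field $\rho$ defined in a neighborhood of $d_n$, Corollary \ref{conjugcoro} provides in each $U_p^n$ a one-parameter family $\Phi_p(\lambda)\in G$ of diffeomorphisms fixing $d_n$ and depending smoothly on $\lambda\in[0,1]$ such that $(e^{\log\lambda\rho})^\star=\Phi_p(\lambda)^\star\circ(e^{\log\lambda\rho_p})^\star$. Hence $(e^{\log\lambda\rho})^\star\chi_I^{(p)}=\Phi_p(\lambda)^\star\chi_I^{(p)}$ is a smoothly $\lambda$-dependent family of elements of $C^\infty(U_p^n\setminus d_n)$, and therefore is bounded in every $C^k$-seminorm on every compact subset of $M^n\setminus d_n$, which is exactly boundedness in $\mathcal{D}^\prime_{\emptyset}(M^n\setminus d_n)$. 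The same argument applies to the smooth cutoffs $\eta_\alpha$ and to the classical Stora piece on $M^n\setminus\overline{\mathcal{V}}$, where the scaling flow of $\rho$ moves compact sets only in a controlled way.

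The main technical obstacle I would expect is the quantitative comparison between the curved causal order on $M$ and its Minkowski linearization at $x_0$: one must verify that an $O(|h|^2)$ perturbation of the metric cannot turn a strict Minkowski non-causal relation between $h_i$ and $h_j$ into a curved causal one, uniformly for $h/|h|$ ranging over the compact unit sphere. This is handled by a compactness argument on $\{|h|=1\}$ combined with a further shrinking of $U_p^n$, using that the Minkowski cones entering the definition of the $\tilde{M}_{I,I^c}$ can be chosen to have a strictly positive angular gap with the boundary of the curved light cone, so that the inclusion $\tilde{M}_{I,I^c}\subset M_{I,I^c}$ persists in a sufficiently small neighborhood of $d_n$.
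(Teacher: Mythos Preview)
Your approach is essentially the same as the paper's: work in local charts around $d_n$, replace the curved causal order by a constant-coefficient one that is scale- and translation-invariant, pull back a partition of unity from the sphere $\{|h|=1\}$ to obtain degree-$0$ homogeneous functions, and glue. The one substantive point where the paper is more explicit is the ``technical obstacle'' you flag at the end: rather than using the linearized Minkowski order and then arguing by compactness that an angular gap can be arranged, the paper builds the gap in from the start by defining the auxiliary order via a \emph{widened} constant cone $Q_c=\eta_{\mu\nu}dx^\mu dx^\nu+c^2(dx^0)^2$ with $c$ chosen large enough that $\{g_x(\xi,\xi)\geqslant 0,\xi^0>0\}\Rightarrow Q_c(\xi)>0$ uniformly on the chart. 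This guarantees directly that $x_i\leqslant x_j$ (curved) implies $x_i\,\tilde\leqslant\,x_j$ (flat), hence $\tilde M_{I,I^c}\subset M_{I,I^c}$, without a separate compactness step. Your resolution amounts to the same thing once made precise, so the argument is correct.
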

Note that for every
$I$,
$\chi_I$
is in $E_0(M^n)$.

\begin{proof}
For $x_0\in M$, we localize 
in a neighborhood of $(x_0,...,x_0)\in d_n$. 
Using a local chart, we identify some
neighborhood of $x_0$ with $U\subset \mathbb{R}^d$, 
on $U$ the metric reads $g$. 
We pick coordinates $(x^\mu)_\mu$ on $U$
in such
a way that 
$g_{\mu\nu}(0)dx^\mu dx^\nu=\eta_{\mu\nu}dx^\mu dx^\nu$ ($\eta$ is of signature $+,-,-,-$).  

 We are going to 
construct another poset structure
on $U^2$.
For every $x\in U$,
we denote by $C_x=\{y\geqslant x\}\cap U$ the set
of elements of $U$
in the causal future of $x$.
We consider the closed subset $\{x_i\leqslant x_j\}\cap U^2\subset U^2$. 
This set fibers on $U$:
$$\{x_i\leqslant x_j \}\cap U^2=\left(\bigcup_{x_i\in U}\{x_i\}\times C_{x_i}\right)\subset U\times U $$
Then in this local chart $U\subset \mathbb{R}^d$, set the quadratic form $Q=\eta_{\mu\nu}dx^\mu dx^\nu + c^2(dx^0)^2  $ where the aperture of the future cone of $Q$ depends on the parameter $c$. The metric $g$ depends smoothly on $x$ and thus satisfies the estimate $\vert g_{\mu\nu}(x)-\eta_{\mu\nu} \vert\leqslant C\vert x\vert$ on $U$. 
For any strictly positive $c>0$, we have the following estimate at $x_0$:   
$$\xi^0>0\text{ and }g_{\mu\nu}(0)\xi^\mu \xi^\nu=\eta_{\mu\nu}\xi^\mu \xi^\nu \geqslant 0\implies \eta_{\mu\nu}\xi^\mu \xi^\nu + c^2 (\xi^0)^2>0$$   
hence since $g_{\mu\nu}$ is
continuous we can 
find $U$ 
small enough and 
$c$ large enough 
in such
a way that 
\begin{equation}\label{conecont}
\xi^0>0, \sup_{x\in U} g_{\mu\nu}(x)\xi^\mu \xi^\nu \geqslant 0\implies \eta_{\mu\nu}\xi^\mu \xi^\nu + c^2 (\xi^0)^2>0.
\end{equation}  
Set $\tilde{C}$ the future solid cone defined by the constant metric 
$Q_c=\eta_{\mu\nu}dx^\mu dx^\nu+c(dx^0)^2$, $\tilde{C}$ is given by the equations:
\begin{eqnarray}
x^0\geqslant 0\\
Q_c(x)\geqslant 0.
\end{eqnarray}
Intuitively, if $c\rightarrow \infty$, the future cone 
$\tilde{C}$ for the constant metric $Q$ 
has solid angle which tends to $2\pi$. 
Hence for  
$c$ sufficiently large, equation (\ref{conecont}) means that 
the future cone $\tilde{C}$ contains all future conoids $C_x$ for all $x\in U$.
Then: 
$$\{x_i\leqslant x_j\} 
\subset \bigcup_{x_i\in U}\{x_i\}\times\tilde{C}\subset U\times U .$$
\begin{figure} %on ouvre l'environnement figure
\begin{center}
\includegraphics[width=12cm]{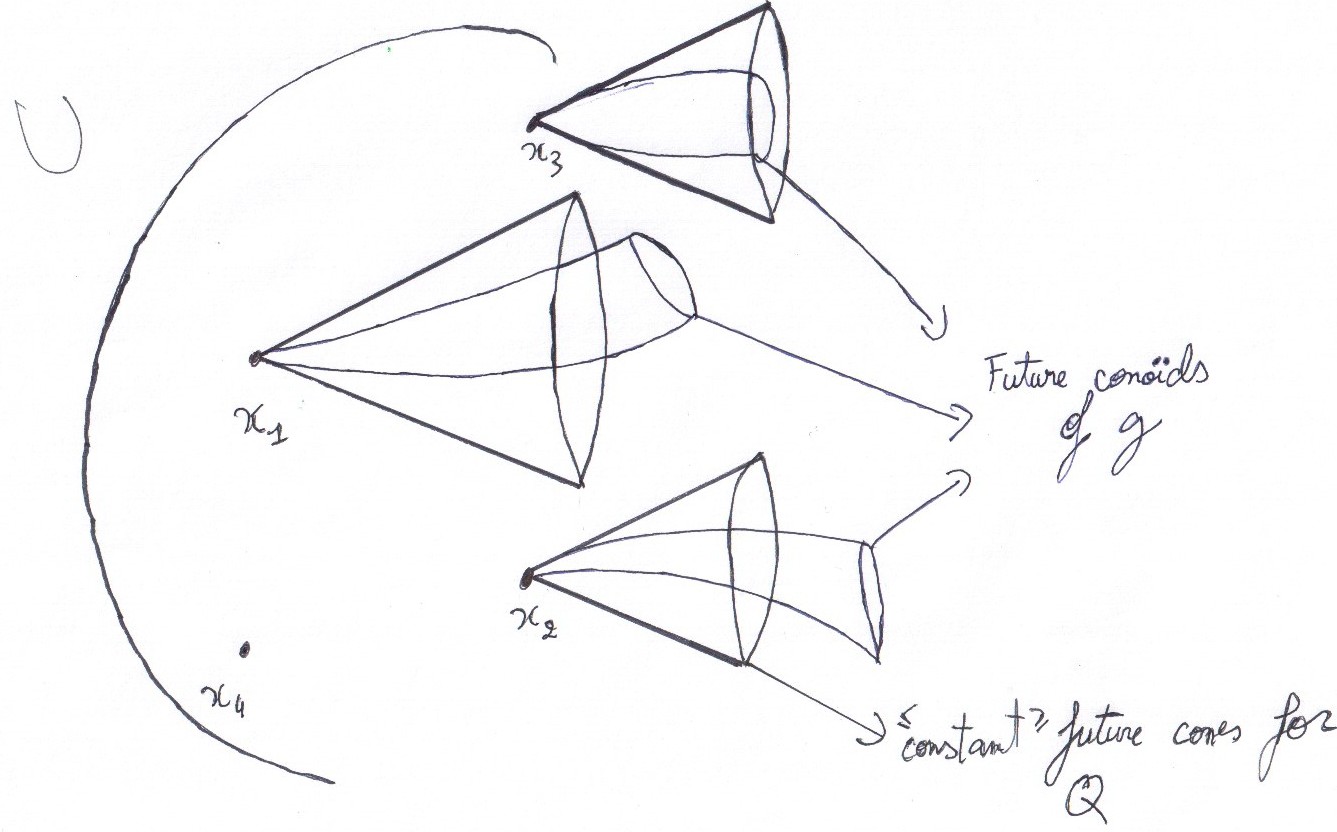} %ou image.png, .jpeg etc.
\caption{$C_{123}$ for the partial order $\leqslant$ and for $\tilde{\leqslant}$} %la légende
\label{Soften_Stora} %l'étiquette pour faire référence à cette image
\end{center}
\end{figure} %on ferme l'environnement figure 

 $\tilde{C}$ defines a 
new partial order relation $\tilde{\geqslant}$, 
hence a new poset structure on 
$U$ defined as follows:
\begin{equation}
x_j\tilde{\geqslant} x_i \text{ if } 
x^0_j-x^0_i\geqslant 0 \text{ and } Q(x_j-x_i)\geqslant 0,
\end{equation} 
where both the cones $\tilde{C}$ 
and the corresponding 
partial order relation are invariant 
(in the configuration 
space
$\mathbb{R}^{nd}$) 
under the action of the group 
$\mathbb{R}^*\ltimes \mathbb{R}^d $:
$$(\lambda,a)\in \mathbb{R}^*\ltimes \mathbb{R}^d : x\in\mathbb{R}^{d}  \mapsto \lambda x + a\in \mathbb{R}^{d}.$$
Define for this 
new order relation 
new open sets 
$\tilde{M}_{I,I^c}=\{ \forall(i,j)\in I\times I^c, 
x_i\tilde{\nleqslant} x_j  \}$.
Notice that if $x_i\leqslant x_j$ for the old order relation, then $x_i\tilde{\leqslant} x_j$ for the new order relation. Consequently, the sets $M_{I,I^c}$ defined for the order relation 
$\leqslant$ are  
larger than the sets $\tilde{M}_{I,I^c}$ 
defined for $\tilde{\leqslant}$.
Applying the geometrical lemma, 
we find: 
$$U^n\setminus d_n \subset \bigcup_{I\subset \{1,...,n\}} \tilde{M}_{I,I^c} .$$ 
The group $\mathbb{R}^*\ltimes \mathbb{R}^d$ 
acts on the configuration space $\mathbb{R}^{dn}$, 
for $(\lambda,a)\in \mathbb{R}^*\ltimes \mathbb{R}^d$,
we define the transformation:
$$(x_1,...,x_n)\in\mathbb{R}^{dn}  
\mapsto 
(\lambda x_1 + a,...,\lambda x_n + a)
\in \mathbb{R}^{dn}.$$
%then inspired by the invariance property of our new partial order, we \textbf{quotient} $\mathbb{R}^{nd}$ by the group of  translations and the quotient space is isomorphic to $\mathbb{R}^{d(n-1)}$. The $\tilde{M}_{I,I^c}$ are translation and dilation invariant.
%
% Then we shall consider the restriction of the open cover $\tilde{C_I} \mod \mathbb{R}^d$ to the submanifold $\sum_{i=2}^n \vert h_i\vert^2=1$ which is an embedded sphere $\mathbb{S}^{(n-1)d-1}$ in the quotient space.

 We describe our 
construction 
in terms of \textbf{fibrations} 
of $\mathbb{R}^{nd}\setminus d_n$. 
%\textbf{by the orbits} of the group of translations over a first quotient space $\mathbb{R}^{d(n-1)}\setminus (0,\dots,0)$. Then we do a 
%second quotient 
%by a fibration of the first quotient space 
%$\mathbb{R}^{d(n-1)}\setminus (0,\dots,0)$ 
%by the orbits of the group of dilations:
$$\begin{array}{ccccc}\mathbb{R}^{nd}\setminus d_n &\longrightarrow & \mathbb{R}^{d(n-1)}\setminus (0,\dots,0) & \longrightarrow & \mathbb{S}^{(n-1)d-1}  
\\ (x_1,\dots,x_n) &\longmapsto  & (h_2=x_2-x_1,\dots,h_n=x_n-x_1) & \longmapsto   &  (\frac{h_2}{\sum_{i=2}^n h^2_i},\dots,\frac{h_n}{\sum_{i=2}^n h^2_i}) \end{array}$$
The first quotient is by the group of translation.
The image of $d_n$ by 
the first projection is the origin $(0,\dots ,0)\in \mathbb{R}^{d(n-1)}$.
The second quotient is by the group of dilations.
We denote by $\pi$ the projection
$\pi:(x_1,\dots,x_n)\mathbb{R}^{nd}\setminus d_n\mapsto  
(\frac{h_2}{\sum_{i=2}^n h^2_i},\dots,\frac{h_n}{\sum_{i=2}^n h^2_i})\in\mathbb{S}^{(n-1)d-1} $.
%Geometrically, $\sum_{i=2}^n \vert h_i\vert^2=1$ intersects the orbits of the group $\mathbb{R}^*$ acting on the quotient configuration space $\mathbb{R}^{d(n-1)}\setminus (0,\dots,0)$.
% The order relation is induced on the submanifold, 
%$$i\leqslant j \Leftrightarrow  h_j^0-h_i^0\geqslant 0, Q(h_j - h_i)\geqslant 0 $$
%hence the $C_I$ are induced on the quotient space $\mathbb{S}^{(n-1)k-1}$, notice that for any pair of points on the $\mathbb{S}^{(n-1)k-1}$ either $h_i\nleqslant h_j $ either $h_j\nleqslant h_i$.
The open cover $(\tilde{M}_{I,I^c})_I$
are inverse images of some open cover $(\tilde{m}_{I,I^c})_I$ of the sphere $\mathbb{S}^{(n-1)d-1}$. Let  $(\varphi_I)_I$ 
be a partition of unity subordinate to 
the open cover $(\tilde{m}_I)_I$
of $\mathbb{S}^{(n-1)d-1}$.
Then we pull-back the functions 
$(\varphi_I)_I$ on $\mathbb{R}^{nd}\setminus d_n$
and set $\forall I, \chi_I=\pi^*\varphi_I$:
$$\chi_I(x_1,\cdots,x_n)=\varphi_I(\frac{x_2-x_1}{\sqrt{\sum_2^n (x_j-x_1)^2}},\cdots,\frac{x_n-x_1}{\sqrt{\sum_2^n (x_j-x_1)^2}}).$$ 

 The collection of functions $\left(\chi_I\right)_I$ are both scale and translation invariant by
the Euler vector field $\rho=\sum_{j=2}^n (x_j-x_1)\left(\partial_{x_j}-\partial_{x_1}\right)$. 
In the relative coordinate system $(x_1,h_{21}=x_2-x_1,...,h_{n1}=x_n-x_1)$, we notice that the collection  
$(\chi_{I})_I$ only depends on the $(h_{i1})_{i\geqslant 2}$. $\chi_{I}$ is smooth in $\mathbb{R}^{nd}\setminus d_n$ hence $\chi_I\in \mathcal{D}^\prime_{\emptyset}(U^n\setminus d_n)$. If we scale linearly, we notice $(\chi_I)_\lambda(h)=\chi_I(\lambda h)=\chi_I(h)$ thus the family $(\chi_I)_\lambda$ is bounded in $\mathcal{D}^\prime_{\emptyset}(U^n\setminus d_n)$. However, we know that 
the boundedness of this family in $\mathcal{D}^\prime_{\emptyset}(U^n\setminus d_n)$ 
and the degree of homogeneity
does not depend on the choice of Euler vector field.

  Let $(U_a)_{a\in A}$ be a locally finite cover of $M$ then the collection of open sets $(U_a)^n_a$ forms an open cover of a neighborhood of $d_n$. 
%In a local chart for $U^n$, we can simply set in local coordinates $(x_1,...,x_n)$, we do this operation for all balls $U\in M$, then by paracompactness of $M$, we can extract a countable and locally finite cover of $M$: $\left(B_{x_i}(\varepsilon_i)\right)_i$, then $\bigcup_{i} B^n_{x_i}(\varepsilon_i)$ forms an open cover of $d_n$.
% 
Let $\varphi_a$ be a partition of unity subordinate to the cover $(U_a)_a^n$.
Then we can patch together the 
various functions $\chi_{I,a}$ 
constructed from the cover 
by the formula 
$$\tilde{\chi}_I=\sum_a \frac{\chi_{I,a}
\varphi^2_a}{\sum_J\sum_a \chi_{J,a}\varphi^2_a}$$
where the sum in the denominator is locally finite.
\end{proof}

\paragraph{Remark.}

The fact that $\chi_I\in C^\infty(U^n\setminus d_n)$ does not immediately imply that the family
$(\chi_I)_{\lambda,\lambda\in [0,1]}$ is \textbf{bounded} in $\mathcal{D}^\prime_{\emptyset}(U^n\setminus d_n)$. For example, 
consider the function $\sin(\frac{1}{x})\in C^\infty(\mathbb{R}\setminus \{0\})$. For any interval $[a,b]\subset \mathbb{R}\setminus \{0\}$, we can construct a sequence $\lambda_n$ which tends to $0$ such that $\frac{d}{dx}\sin(\frac{1}{\lambda_nx})=\frac{1}{\lambda_n x^2}\cos(\frac{1}{\lambda_nx})\rightarrow \infty$ hence the family $\sin(\frac{1}{\lambda x})_\lambda$ is not bounded in $C^1[a,b]$ thus it is not bounded in $\mathcal{D}^\prime_{\emptyset}(\mathbb{R}\setminus \{0\})$.

\section{The recursion.}

\paragraph{Notation, definitions.}
We
denote by
$x\simeq y$ if
$x$ and $y$ 
in $M$ are
connected
by a 
lightlike
geodesic
and 
$(x;\xi)\sim (y;\eta)$
if these two elements
of the cotangent
are connected
by a null bicharacteristic 
curve i.e. a Hamiltonian
curve for the 
Hamiltonian 
$g_{\mu\nu}\xi^\mu\xi^\nu\in C^\infty(T^\star M)$.\\
We denote by
$x>y$ if $x$ 
is in the future 
cone of $y$
and $x\neq y$.\\
Recall the 
configuration space $M^I$
is the set of maps 
from $I$ to $M$ 
then the small diagonal $d_I$
is just the subset 
of constant maps from
$I$ to $M$.\\
We denote by $M_{I,I^c},I\sqcup I^c=[n]$
the set of all elements $(x_1,\dots,x_n)\in M^n$ s.t.
$\forall (i,j)\in I\times I^c$ $x_i\nleqslant x_j$.
By the geometrical lemma 
the collection $\left(M_{I,I^c}\right)_{I}$
forms an open cover
of $M^n\setminus d_n$ and we denote by 
$(\chi_I)_I$ the subordinate
partition of unity.\\
$E_g^+$ is the set of all elements 
in cotangent space having \textbf{positive energy}, 
the concept of
positivity of energy being defined 
relative to the choice of Lorentzian metric $g$. 
\begin{defi}
$E_g^+=\{(x,\xi) |  g_x(\xi,\xi)\geqslant 0,\xi_0>0  \}\subset T^\bullet M$.
\end{defi}
It is a 
\textbf{closed conic convex} set of $T^\bullet M$ and has the property that $E_g^+\cap -E_g^+=\emptyset$. 
We will denote by $E^+_{g,x}$ the component of $E_g^+$ 
living in the fiber $T_{x}^\bullet M$ over $x$.\\
\paragraph{Causality equation and wave front sets.}
The fact that for all $n$, $t_n \in Hom(H^n,\mathcal{D}^\prime \left(M^n\right) )$ satisfies the causality equation imposes some constraints on the wave front set of $t_n$. 
In $M^n$ with coordinates 
$(x_i)_{i\in\{1,\dots,n\}}$, 
$(\chi_I)_I$ is 
the partition of unity subordinate to the cover 
$(M_{I,I^c})_{I}$ of $M^n\setminus d_n$ 
given by the
improved geometrical lemma.
For all $n$, $t_n(A)\in \mathcal{D}^\prime\left(M^n\setminus d_n\right)$ 
satisfies the equation:
\begin{equation}\label{key formula}
t_n(A)=\sum_{M_{I,I^c}}\sum \chi_I t_I(A_{I(1)})t_{I^c}(A_{I^c(1)})\left( A_{I(2)} \vert A_{I^c(2)}\right),
\end{equation} 
where $\left(\phi(x_i) \vert \phi(x_j) \right)=\Delta_+(x_i,x_j)$.
For the sake of simplicity, each of the term 
$t_I(A_{I(1)})t_{I^c}(A_{I^c(1)})\left( A_{I(2)} \vert A_{I^c(2)}\right)$
in the above sum writes:
\begin{equation}\label{key formula simplified}
t_I  
\left(\prod_{ij\in I\times I^c} 
\Delta^{m_{ij}}_+(x_i,x_j) \right)t_{I^c}.
\end{equation}
since each
Laplace coupling 
$((A_I)_{(2)} \vert (A_{I^c})_{(2)})=(\prod_{i\in I}\phi^{k_i}(x_i)\vert \prod_{j\in I^c}\phi^{k_j}(x_j))$
is a product of Wightman propagators: 
$\left(\prod_{ij\in I\times I^c} \Delta^{m_{ij}}_+(x_i,x_j) \right)$,  
$t_I=t_I(A_{I(1)})$ 
and $t_{I^c}=t_{I^c}(A_{I^c(1)})$.
We now face the problem of 
defining $t_n$ 
recursively by using the equation 
(\ref{key formula simplified}),
the difficulty is to make sense 
of the r.h.s. of 
(\ref{key formula simplified}) 
on $M^n\setminus d_n$ 
which is a 
problem of multiplication of distributions 
and the second difficulty 
is to extend the distribution $t_n\in \mathcal{D}^\prime(M^n\setminus d_n)$ (while retaining nice analytical properties) which is only defined on $M^n\setminus d_n$ to a distribution defined on $M^n$.
We prove that renormalisability is local in $M$, 
for all $p\in M$ 
there exists an open neighborhood $\Omega$
of $p$
on 
which all $t_n$ are well defined 
as elements of
$\mathcal{D}^\prime(\Omega^n\setminus d_n)$
and can be extended
as elements of 
$\mathcal{D}^\prime(\Omega^n)$.
In the sequel, using a local chart
around $p$, we will identify $\Omega$
with an open set $U\subset \mathbb{R}^{d}$.
In $U$, the metric reads $g$.
The main theorem we prove is the following
\begin{thm}
The set of equations (\ref{key formula}) can be solved
recursively in $n$, where 
for each $n$,
if all $t_I, I\varsubsetneq [n]$ are given
then the product of distribution
makes sense on $M^n\setminus d_n$ and 
defines a unique element
$t_n\in \mathcal{D}^\prime(M^n\setminus d_n)$ which 
has
some extension 
in $\mathcal{D}^\prime(M^n)$.
\end{thm}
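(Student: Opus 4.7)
The plan is to proceed by induction on $n$. The base case $n=1$ is trivial, and $n=2$ is handled by Theorem \ref{delta+bounded}, which provides $t_2 \in E^\mu_{-2}(\mathcal{V})$ satisfying the microlocal spectrum condition. For the inductive step, assume that for every $I \varsubsetneq [n]$ the distribution $t_I \in \mathcal{D}^\prime(M^I)$ is already constructed, is microlocally weakly homogeneous at each small diagonal $d_J \subset M^I$ ($J \subset I$), and that $WF(t_I)$ has the polarization property that covectors over coinciding points lie in the positive energy cone $E_g^+$ (the microlocal spectrum condition). The goal is to construct $t_n$ on $M_{I,I^c}$ via the Hopf-algebraic causality formula (\ref{key formula simplified}), glue the pieces using the partition of unity $(\chi_I)_I$ of Lemma \ref{newStoralemma}, and finally extend across $d_n$ by the microlocal extension theorem.

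First, on each $M_{I,I^c}$ I would show that the product
\[
 \chi_I \cdot t_I(A_{I(1)}) \cdot t_{I^c}(A_{I^c(1)}) \cdot \prod_{(i,j)\in I\times I^c} \Delta_+^{m_{ij}}(x_i,x_j)
\]
is a well-defined distribution in $E^\mu_{s}(M_{I,I^c})$ for $s = s_I + s_{I^c} - 2\sum m_{ij}$. The key input is that $\Delta_+$, pulled back to the various factor pairs $M \times M$, has wave front set in the Hadamard set $\overline{\Lambda} \cap \{\eta_2^0 > 0\}$ (Theorem \ref{Wavefrontpullback}) and lies in $E^\mu_{-2}$ (Theorem \ref{delta+bounded}); the inductive $t_I,t_{I^c}$ have analogous properties. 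Crucially, on $M_{I,I^c}$ no point of $I$ lies in the causal past of a point of $I^c$, so for every contributing covector the $I$-part and the $I^c$-part carry momenta of \emph{the same sign} along null connecting geodesics, so that $\Gamma_1 \cap (-\Gamma_2) = \emptyset$. This verifies the hypothesis of Theorem \ref{productbounded}, so the product makes sense and is bounded in $\mathcal{D}^\prime_\Gamma$; the soft landing condition is preserved by Proposition \ref{sumstable}, microlocal weak homogeneity is preserved by the boundedness estimates of Chapter 4, and the factor $\chi_I \in E^\mu_0$ contributes no loss of homogeneity thanks to Lemma \ref{newStoralemma}.

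Next, the consistency check already carried out above (using that $a_{1c} \sim a_{c1}$ whenever both relations $x_i \nleqslant x_j$ and $x_j \nleqslant x_i$ hold) shows that the local expressions agree on overlaps $M_{I,I^c} \cap M_{J,J^c}$, so summing against the partition of unity produces a unique element $t_n \in \mathcal{D}^\prime(M^n\setminus d_n)$, microlocally weakly homogeneous at every partial diagonal strictly contained in $d_n$. Finally, working in an exponential chart centered at a point of the thin diagonal $d_n$ and applying the extension Theorem \ref{thmfin2} to $t_n$ with respect to the closed submanifold $d_n$ yields $\overline{t_n} \in E^\mu_{s^\prime}(M^n)$ with $s^\prime = s$ when $-s-d \notin \mathbb{N}$ and $s^\prime < s$ otherwise, and with $WF(\overline{t_n}) \subset \overline{WF(t_n)} \cup C$ where $C$ is the conormal of $d_n$.

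The main obstacle will be verifying that each product on $M_{I,I^c}$ is microlocally admissible uniformly along approaches to the excluded partial diagonals that are \emph{not} subsets of $d_n$ itself. There the soft landing condition for the various factors could a priori degenerate and the sum of wave fronts could fail to avoid the zero section. The resolution is a polarization argument: by induction each $t_I$ has wave front polarized in $E_g^+$ on such partial diagonals, and the Hadamard wave front of $\Delta_+$ shares the same polarization; since $E_g^+$ is a proper convex cone with $E_g^+ \cap -E_g^+ = \emptyset$, no nontrivial cancellation of covectors can occur, and the boundedness of the product in $\mathcal{D}^\prime_\Gamma$ (Theorem \ref{productbounded}) together with the stability statement of Proposition \ref{sumstable} propagates both the soft landing condition and the positive-energy polarization to $\overline{t_n}$, closing the induction.
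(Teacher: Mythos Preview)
Your proposal is essentially correct and follows the same overall architecture as the paper's proof: induction on $n$, use of the polarization of wave front sets in the positive-energy cone $E_g^+$ to justify the products on each $M_{I,I^c}$, stability of the soft landing condition under sums (Proposition~\ref{sumstable}), boundedness of the product (Theorem~\ref{productbounded}), gluing via the scale-invariant partition of unity of Lemma~\ref{newStoralemma}, and finally the microlocal extension Theorem~\ref{thmfin2}.

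A few technical points the paper develops more carefully than your outline and that you should be aware of. First, the polarization argument is sharpened into a formal notion of \emph{polarized} versus \emph{strictly polarized} conic sets (via a trace map to $T^\star M$ and a maximality condition on the Hasse diagram); the key Theorem~\ref{lemm2} requires one factor to be \emph{strictly} polarized, and one checks that on $M_{I,I^c}$ the Laplace couplings $\prod \Delta_+^{m_{ij}}(x_i,x_j)$ are strictly polarized while $t_I t_{I^c}$ is only polarized. Second, to close the induction one needs the cone $\Gamma_n$ containing $WF(t_n)$ to be not just polarized and soft-landing but also \emph{scale and translation invariant} (a ``good'' cone); this is obtained by passing from the variable metric $g$ to a constant metric $q$ on a small chart (so that $E_q^+$ is translation/scale invariant) and then thickening $WF(t_n)$ to its orbit under the affine group. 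Without this step the boundedness of $\lambda^{-s}t_{n,\lambda}$ in $\mathcal{D}^\prime_{\Gamma_n}$, needed to invoke Theorem~\ref{thmfin2}, is not immediate. These are refinements of exactly the mechanism you describe, not a different strategy.
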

We first treat the problem
of multiplication
of distributions 
outside $d_n$, to do this,
we develop
a machinery
which allows us to describe
wave front sets
of Feynman amplitudes.
\subsection{Polarized conic sets.}
The idea of polarization is inspired by 
the exposition of Yves Meyer of Alberto Calderon's result on 
the product of 
$\Gamma$-holomorphic distributions (\cite{Meyerprod} p.~604 
definition 1). 
In $\mathbb{R}^n$ with coordinates $(x_1,\dots,x_n)$, the $\Gamma$-holomorphic distributions studied by Meyer are tempered distributions having their Fourier transform supported on a closed convex cone $\Gamma$ in the Fourier domain which is contained in the upper half plane $\xi_n>0$.
The beautiful remark of Meyer is that $\Gamma$-holomorphic distributions can always be multiplied (the product extends to $\Gamma$-holomorphic distributions) and form an algebra for the extended product (because of the convexity of $\Gamma$ the convolution product in the Fourier domain preserves is still supported on $\Gamma$)!  
For QFT, we are let to introduce
the concept
of \textbf{polarization}
to describe
subsets
of the cotangent of configuration 
spaces 
$T^\bullet M^n$ for all $n$:
this generalizes
the concept of positivity
of energy 
for 
the cotangent space 
of configuration space.
 
In order to generalize this condition 
to the wave front set of $n$-point functions, 
we define the right concept of 
positivity of energy which is adapted 
to conic sets in $T^\bullet M^n$:
\begin{defi}\label{polarized}
We define a \textbf{reduced polarized part} 
(resp \textbf{reduced strictly polarized part}) 
as a conical subset
$\Xi \subset T^*M$ such that, 
if $\pi:T^*M\longrightarrow M$ 
is the natural projection, 
then $\pi(\Xi)$ 
is a finite subset $A=\{a_1,\cdots, a_r\}\subset M$
and, 
if $a\in A$ is maximal (in the sense 
there is no element $\tilde{a}$ in $A$ s.t. $\tilde{a}>a$), 
then $\Xi\cap T_a^*M \subset (-E_g^+\cup \{0\})$ 
(resp $\Xi\cap T_a^*M \subset (-E_g^+)$)
where $E_g^+$ is the subset of elements
of $T^\star M$ of positive energy.
\end{defi}
We define
the trace
operation
as a map
which 
associates
to each element 
$p=(x_1,\dots,x_n;\xi_1,\dots,\xi_n)\in\left(T^*M\right)^k$
some finite part
$Tr(p)\subset T^\star M$.
\begin{defi}
For all elements $p=((x_1,\xi_1),\cdots,(x_k,\xi_k)) \in T^*M^k$, 
we define
the \textbf{trace} $Tr(p)\subset T^*M$ 
defined by the set of elements
$(a,\eta)\in T^\star M$
such that 
$\exists i\in [1,k]$ 
with the property that 
$x_i=a$, $\xi_i\neq 0$ and
$\eta=\sum_{i;x_i=a}\xi_i$.
\end{defi}
Then 
finally, 
we can define 
polarized subsets $\Gamma \subset T^*M^k$:
\begin{defi}
A conical subset $\Gamma \subset T^*M^k$ 
is \textbf{polarized} (resp strictly polarized) 
if for all $p\in\Gamma$, 
its trace $Tr(p)$  
is a reduced polarized part 
(resp reduced strictly polarized part) of $T^*M$.
\end{defi}
The union of two polarized (resp strictly polarized) subsets 
is polarized (resp strictly polarized) and if a conical subset
is contained in a polarized subset
it is also polarized.
%This definition
%which seems
%very artificial
%and complicated
%at first sight
%is important since
%we will see that
%it is satisfied by
%the two point functions 
%$t_2$, the Laplace couplings
%and the polarization 
%property
%allows to multiply distributions 
%and is stable by products.
\begin{figure} %on ouvre l'environnement figure
\begin{center}
\includegraphics[width=12cm]{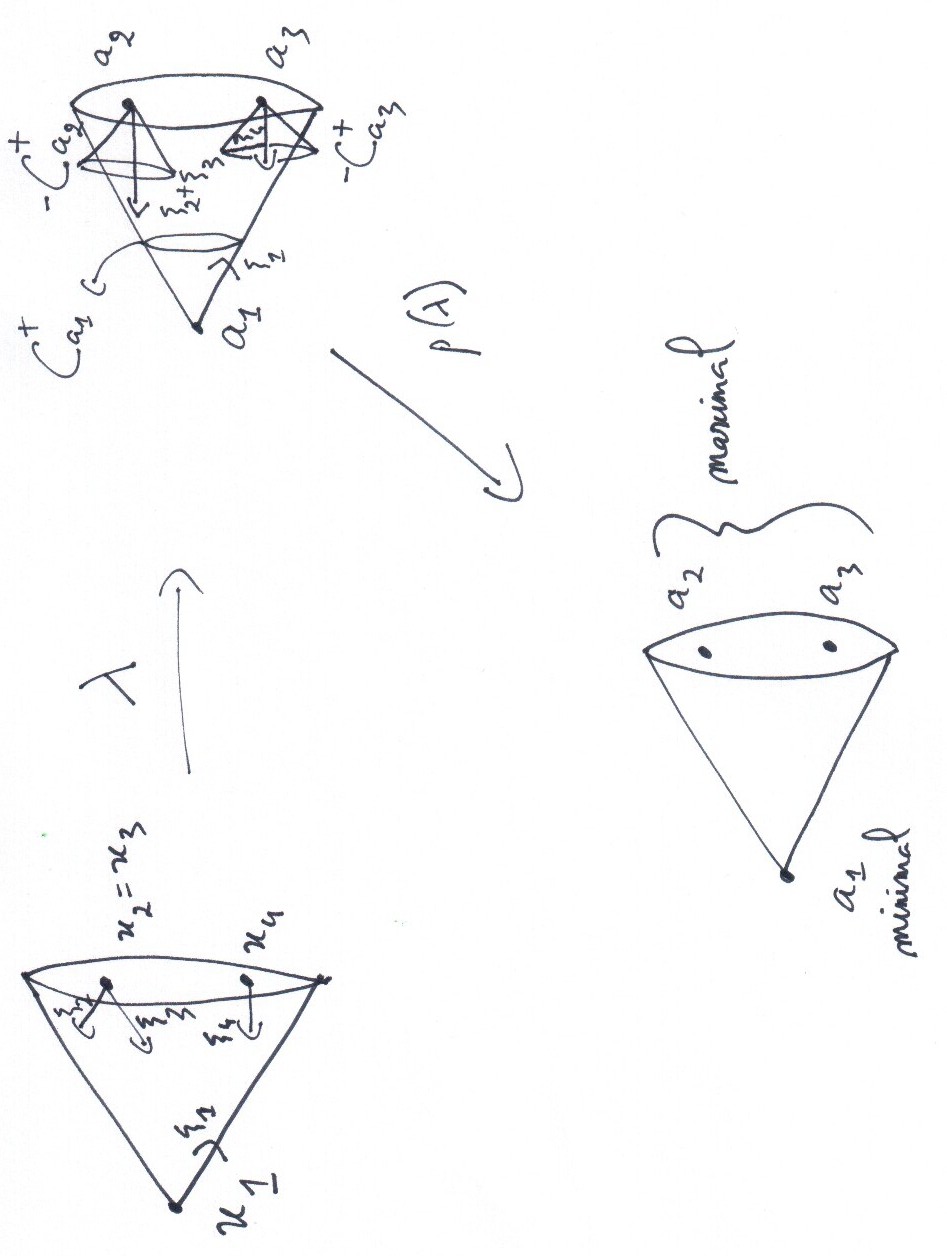} %ou image.png, .jpeg etc.
\caption{A polarized set, the trace $Tr$ and the projection $\pi\circ Tr$.} %la légende
\label{polarization} %l'étiquette pour faire référence à cette image
\end{center}
\end{figure} %on ferme l'environnement figure 

The role of polarization is to control
the wave front set of the
distributions of the form
$\left\langle 0| T \phi^{i_1}(x_1)\dots\phi^{i_n}(x_n)|0 \right\rangle $.
\paragraph{The wave front set of $\Delta_+$.}

In Theorem 
\ref{Wavefrontpullback}, 
we proved that 
for all $m\in\mathbb{N}$, 
$WF(\Delta^m_+)|_{U^2\setminus d_2}
\subset \{\text{Conormal }\Gamma=0\}
\cap \left(-E^+_g\times E^+_g\right)$
where $E^+_g$ is the set of elements 
of positive energy in $T^\bullet M$.
\begin{figure} %on ouvre l'environnement figure
\begin{center}
\includegraphics[width=12cm]{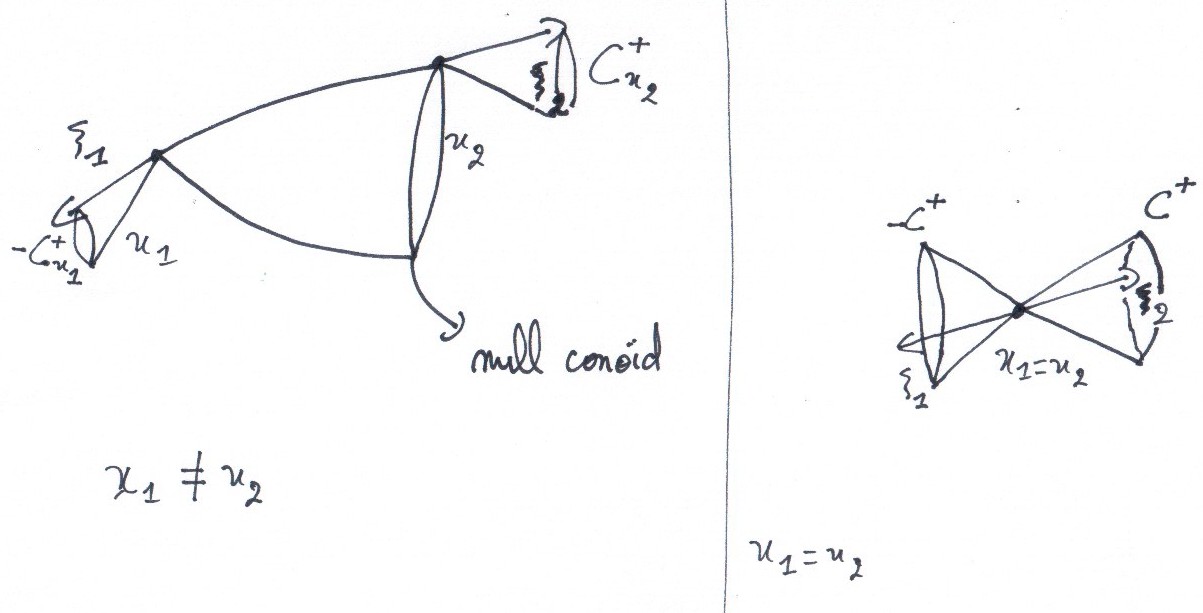} %ou image.png, .jpeg etc.
\caption{wave front set of $\Delta_+^m$.} %la légende
\label{Wavefront_delta_plus_m} %l'étiquette pour faire référence à cette image
\end{center}
\end{figure} %on ferme l'environnement figure 
Thus if $(x_1,x_2;\xi_1,\xi_2)\in WF(\Delta^m_+(x_1,x_2))|_{U^2\setminus d_2}$, two cases arise:
\begin{itemize}
\item if $x_1\nleqslant x_2$ then
we actually have $x_2\leqslant x_1 $ where $x_1\in M$ is maximal in $\{x_1,x_2\}$ and $\xi_1\in -E^+_{g,x_1}$
thus $WF(\Delta^m_+(x_1,x_2))|_{x_1\nleqslant x_2}$
is strictly polarized, 
\item if $x_2\nleqslant x_1$ then
we actually have $x_1\leqslant x_2 $ where $x_2\in M$ is maximal in $\{x_1,x_2\}$ and $\xi_2\in E^+_{g,x_1}$ thus $WF(\Delta^m_+(x_1,x_2))|_{x_2\nleqslant x_1}$
is \textbf{not polarized}.
\end{itemize}
\begin{coro}
For all $(i,j)\in I\times I^c,I\sqcup I^c=[n]$,
$WF(\Delta^m_+(x_i,x_j))|_{M_{I,I^c}}$ 
is strictly polarized.
\end{coro}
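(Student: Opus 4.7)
The plan is to compute the wave front set of $\Delta_+^m(x_i,x_j)$, viewed as a distribution on $M^n$ by pullback along the submersion $\pi_{ij}:M^n\to M^2$, $(x_1,\dots,x_n)\mapsto(x_i,x_j)$, and then to verify the polarization condition pointwise on $M_{I,I^c}$. By the pullback theorem of H\"ormander applied to a submersion, any element of $WF(\pi_{ij}^\star\Delta_+^m)$ has the form $(x_1,\dots,x_n;\xi_1,\dots,\xi_n)$ with $\xi_k=0$ for $k\notin\{i,j\}$ and $((x_i,x_j);(\xi_i,\xi_j))\in WF(\Delta_+^m)$ on $M^2$.

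Next I would invoke the description of $WF(\Delta_+^m)$ on $M^2\setminus d_2$ just recalled, contained in $\Lambda\cap(-E^+_g\times E^+_g)$, where $\Lambda$ is the conormal of the conoid $\{\Gamma=0\}$ with zero section removed. Any nontrivial $((x_i,x_j);(\xi_i,\xi_j))$ then has both $\xi_i\in -E^+_{g,x_i}$ and $\xi_j\in E^+_{g,x_j}$ nonzero, and forces $x_i,x_j$ to lie on a common null geodesic. On $M_{I,I^c}$ the hypothesis $x_i\nleqslant x_j$ excludes $x_i=x_j$ and $x_i<x_j$; combined with geodesic connectedness, the only remaining possibility for a nontrivial wavefront element is $x_j<x_i$ strictly. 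The alternative, in which $x_i$ and $x_j$ are spacelike separated, gives empty wavefront and a vacuous claim.

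Finally I would verify polarization. Since only $\xi_i$ and $\xi_j$ are nonzero and $x_i\neq x_j$ on $M_{I,I^c}$, the sums in the definition of $Tr(p)$ collapse to single terms (other indices $l$ with $x_l\in\{x_i,x_j\}$ contribute $\xi_l=0$ to the sum). Hence $Tr(p)=\{(x_i,\xi_i),(x_j,\xi_j)\}$ and $\pi(Tr(p))=\{x_i,x_j\}$. Because $x_j<x_i$ in the causal poset $(M,\leqslant)$, $x_i$ is the unique maximal element; the fiber of $Tr(p)$ over $x_i$ is $\{\xi_i\}\subset -E^+_{g,x_i}$, while the non-maximal $x_j$ is unconstrained. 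This is precisely the data of a reduced strictly polarized part in the sense of Definition \ref{polarized}, and since $p$ was arbitrary the corollary follows. The one delicate check, carried out in the sign analysis of the discussion preceding the statement, is that positive energy of $\xi_j$ together with $x_i\nleqslant x_j$ really forces $x_i$ (and not $x_j$) to be maximal, so that the constraint $\xi\in -E^+$ falls on $\xi_i\in -E^+_{g,x_i}$ as required.
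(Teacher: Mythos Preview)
Your proof is correct and follows essentially the same line as the paper. The paper does not give an explicit proof of this corollary: it is meant to follow immediately from the case analysis just preceding it, namely that for $(x_i,x_j;\xi_i,\xi_j)\in WF(\Delta_+^m)$ with $x_i\nleqslant x_j$ one necessarily has $x_j\leqslant x_i$ (via null-geodesic connectedness) and $\xi_i\in -E^+_{g,x_i}$, which is the strictly polarized situation. Your argument makes explicit two points the paper leaves implicit --- the pullback to $M^n$ via $\pi_{ij}$ and the form of the trace $Tr(p)$ --- but the underlying idea is identical.
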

We have to check that the 
conormals of the diagonals 
$d_I$
are polarized since they are
the wave front sets
of counterterms
from the extension procedure.
\begin{prop}\label{conormpolar}
The conormal of the diagonal $d_I\subset M^I$ is polarized.
\end{prop}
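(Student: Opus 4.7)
Plan. The strategy is purely algebraic: unpack the definition of conormal bundle, the definition of trace, and the definition of polarized set, and observe that the conormal condition $\sum_i\xi_i=0$ forces the trace of any point in the conormal of $d_I$ to be either empty or the single zero element $(x,0)$, which is tautologically polarized.

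More precisely, I would first describe the conormal bundle of $d_I$ concretely. Since $d_I=\{(x,\dots,x)\mid x\in M\}\subset M^I$, its tangent space at a point $(x,\dots,x)$ consists of the diagonal vectors $(v,\dots,v)$ with $v\in T_xM$. Hence an element $p\in(Td_I)^\perp$ has the form
\begin{equation*}
p=(x,x,\dots,x;\xi_1,\dots,\xi_n),\qquad \sum_{i\in I}\xi_i=0.
\end{equation*}
Let me pick such a $p$. Applying the definition of $\Tr(p)\subset T^\star M$: the only point of $M$ appearing among the basepoints is $x$, so $\pi(\Tr(p))\subset\{x\}$, which is finite as required. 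There are only two cases. If every $\xi_i=0$ (only possible if we include the zero section, otherwise this case does not occur) then $\Tr(p)=\emptyset$ and the polarization condition is vacuously satisfied. Otherwise some $\xi_i\neq 0$, so by definition $(x,\eta)\in\Tr(p)$ with $\eta=\sum_{i:\,x_i=x}\xi_i=\sum_{i\in I}\xi_i=0$; hence $\Tr(p)=\{(x,0)\}$.

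The unique point of $\pi(\Tr(p))$ is $x$, which is trivially maximal for the order on $\{x\}$. The requirement for a reduced polarized part is that $\Tr(p)\cap T_x^\star M\subset(-E_g^+\cup\{0\})$, and this holds because $\Tr(p)\cap T_x^\star M\subset\{(x,0)\}\subset\{0\}\subset(-E_g^+\cup\{0\})$. Since this works for every $p$ in the conormal, the conormal of $d_I$ is polarized by the definition given just before the proposition.

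There is no real obstacle here; the only thing to take care of is the bookkeeping of the definition of $\Tr$, which explicitly sums over all indices $i$ with $x_i=a$. The conormal condition $\sum\xi_i=0$ is exactly tailored so that this sum vanishes at the unique basepoint $x$, which is what makes the polarization condition collapse to the trivial statement $0\in-E_g^+\cup\{0\}$. I would also mention that this set is strictly not polarized in the strong sense, since $(x,0)\notin -E_g^+$; only the weaker notion of Definition \ref{polarized} with the $\cup\{0\}$ applies, which is why anomaly and counterterm contributions will be handled via the weak polarization throughout Chapter 6.
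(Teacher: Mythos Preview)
Your proof is correct and follows essentially the same approach as the paper: you identify an arbitrary element $p=(x,\dots,x;\xi_1,\dots,\xi_n)$ of the conormal with $\sum_i\xi_i=0$, compute its trace as $\{(x,0)\}$, and observe that $0\in -E_g^+\cup\{0\}$ so the polarization condition is satisfied. Your additional remark that the conormal is polarized but not strictly polarized is a useful observation that the paper leaves implicit.
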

\begin{proof}
Let $(x_i;\xi_i)_{i\in I}$ be 
in the conormal of
$d_I$,
let $a\in M$ s.t. 
$a=x_i,\forall i\in I$,
and $\eta=\sum \xi_i=0$
is in $-E^+_{g,a}\cup \{0\}$. Thus
the trace $Tr(x_i;\xi_i)_{i\in I}=(a;0)$
of the element $(x_i;\xi_i)_{i\in I}$ in the conormal of $d_I$
is a reduced polarized part of $T^\star M$.
\end{proof}
\begin{prop}\label{t2polar}
For all $m\in\mathbb{N}$,
if $t_2(\phi^m(x_1)\phi^m(x_2))$ satisfies the causality equation 
(\ref{key formula})
on $U^2\setminus d_2$ 
and
$WF\left(t_2(\phi^m(x_1)\phi^m(x_2))\right)|_{d_2}$ is contained 
in the conormal of $d_2$,
then the 
wave front set of $t_2(\phi^m(x_1)\phi^m(x_2))_{U^2}$ is polarized.
\end{prop}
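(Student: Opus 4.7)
The plan is to split $U^2$ as $(U^2\setminus d_2) \sqcup d_2$ and verify the polarization condition of Definition~\ref{polarized} pointwise on each piece, then invoke the observation (made immediately after that definition) that a union of polarized subsets is polarized.

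Over $d_2$, the hypothesis places $WF(t_2)|_{d_2}$ inside the conormal of $d_2$, which by Proposition~\ref{conormpolar} is itself polarized; so this piece is polarized as a subset of a polarized set. Over $U^2\setminus d_2$ I would use the cover $U^2\setminus d_2 = M_{\{1\},\{2\}}\cup M_{\{2\},\{1\}}$ provided by the geometrical lemma. On $M_{\{1\},\{2\}}$ the causality equation~(\ref{key formula}) applied to the coproduct $\Delta\phi^m(x_i) = \sum_k \binom{m}{k}\phi^{m-k}(x_i)\otimes\phi^k(x_i)$ yields a double sum in which $t_1 = \varepsilon$ annihilates every factor $\phi^p$ with $p\geqslant 1$; consequently only the ``fully contracted'' contribution survives, and
\begin{equation*}
t_2(\phi^m(x_1)\phi^m(x_2))\big|_{M_{\{1\},\{2\}}} = \bigl(\phi^m(x_1)\,\big|\,\phi^m(x_2)\bigr) = m!\,\Delta_+^m(x_1,x_2).
\end{equation*}
Exchanging the roles of $1$ and $2$ gives $t_2|_{M_{\{2\},\{1\}}} = m!\,\Delta_+^m(x_2,x_1)$. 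The corollary stated immediately before Proposition~\ref{conormpolar} asserts that $WF(\Delta_+^m(x_i,x_j))|_{M_{I,I^c}}$ is strictly polarized for any splitting $I\sqcup I^c = \{1,2\}$, in particular polarized, so $WF(t_2)|_{U^2\setminus d_2}$ is polarized. Taking the union with the piece over $d_2$ concludes.

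The main subtlety I anticipate is not in either pointwise verification, which reduces to unfolding the causality equation and reading off the established Hadamard-type result for $\Delta_+^m$, but rather in the matching at $d_2$: because $WF$ is closed, strictly polarized covectors over $U^2\setminus d_2$ accumulate over $d_2$, and one must check that the resulting limit covectors still satisfy the polarization condition. This is precisely why the hypothesis on $WF(t_2)|_{d_2}$ is indispensable: it forces every such limit into the conormal of $d_2$, where the trace collapses to $(a,0)$ and the polarization condition holds trivially since $0\in -E_g^+\cup\{0\}$.
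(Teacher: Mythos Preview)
Your proof is correct and follows essentially the same approach as the paper: split into the off-diagonal part (where causality identifies $t_2$ with $m!\,\Delta_+^m$ and the corollary on $WF(\Delta_+^m)|_{M_{I,I^c}}$ gives strict polarization) and the diagonal part (where the hypothesis together with Proposition~\ref{conormpolar} applies), then take the union. The only cosmetic difference is that the paper first reduces to the case $m=1$ via the remark $t_2(\phi^m\phi^m)=m!\,t_2(\phi\phi)^m$ on $U^2\setminus d_2$ and then argues with $\Delta_+$, whereas you invoke the corollary for $\Delta_+^m$ directly; both routes rest on the same ingredients.
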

Notice that it
is enough
to prove the 
proposition 
for $t_2(\phi(x_1)\phi(x_2))$
since $t_2(\phi^m(x_1)\phi^m(x_2))=m!t_2(\phi(x_1)\phi(x_2))^m$
on $U^2\setminus d_2$ thus
$WF\left(t_2(\phi^m(x_1)\phi^m(x_2))\right)\subset WF\left(t_2(\phi(x_1)\phi(x_2))\right)+ WF\left(t_2(\phi(x_1)\phi(x_2))\right)$ 
on $U^2\setminus d_2$.

\begin{proof}
Notice that if $x_1\nleqslant x_2$ then
$T\phi(x_1)\phi(x_2)=\phi(x_1)\star\phi(x_2)$
by the definition of causality 
i.e. $T(AB)=TA\star TB$ if $A\nleqslant B$.
Thus
the field $\phi(x_1)$
associated with 
the element $x_1$, 
where $x_1$ is not in the causal past
of $x_2$, stands on the left
of the product 
$\phi(x_1)\star\phi(x_2)$.
Causality
reads from right to left 
when we write products of
fields i.e. $T(AB)=TA\star TB$ if $A\nleqslant B$.
$$t_2(\phi(x_1)\phi(x_2))=\varepsilon\left( T_2\phi(x_1)\phi(x_2)\right)=\varepsilon\left( \phi(x_1)\star\phi(x_2)\right)  $$
$$ 
\begin{array}{c}
=\Delta_+(x_1,x_2) \text{ if } x_1\nleqslant x_2 \\ 
=\Delta_+(x_2,x_1) \text{ if } x_2\nleqslant x_1,
\end{array}
$$
which implies $WF\left(t_2\right)|_{U^2\setminus d_2}$
is polarized.
Using Proposition \ref{conormpolar} and the fact
that $WF(t_2)|_{d_2}$ is contained in the conormal of $d_2$, 
it is immediate to deduce $WF(t_2)$
is polarized.
\end{proof}

Now we will prove the 
key theorem 
which allows 
to multiply 
two distributions
under some conditions 
of polarization on 
their
wave front sets
and deduces specific 
properties
of the wave front set
of the product:
\begin{thm}\label{lemm2} 
Let $u,v$ be two distributions
in $\mathcal{D}^\prime(\Omega)$, for some subset $\Omega\subset M^n$,
s.t. 
$WF(u)\cap T^\bullet \Omega$ is polarized
and $WF(v)\cap T^\bullet \Omega$ is strictly
polarized. Then
the
product $uv$
makes sense in
$\mathcal{D}^\prime(\Omega)$
and
$WF(uv)\cap T^\star \Omega$ 
is polarized. Moreover,
if $WF(u)$ is also 
strictly
polarized then
$WF(uv)$ is strictly
polarized.
\end{thm}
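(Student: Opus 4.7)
The plan is to apply H\"ormander's theorem on multiplication of distributions: provided that for every $(x,\xi)\in WF(u)$ one has $(x,-\xi)\notin WF(v)$, the product $uv$ is well-defined in $\mathcal{D}^\prime(\Omega)$, with the fiberwise wave front set estimate
\begin{equation*}
WF(uv)\subseteq WF(u)\cup WF(v)\cup\bigl\{p_u+p_v:p_u\in WF(u),\,p_v\in WF(v),\,p_u+p_v\neq 0\bigr\}.
\end{equation*}
First I would verify the H\"ormander compatibility, then show that each of the three pieces of this estimate is a polarized (respectively strictly polarized) conical subset of $T^\star\Omega$.

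For compatibility, suppose for contradiction that $p=((x_1,\xi_1),\dots,(x_n,\xi_n))\in WF(u)$ and $-p\in WF(v)$. Since the projections to $M$ of $\Tr(p)$ and $\Tr(-p)$ coincide, pick $a$ maximal in this common finite subset of $M$ and set $\eta:=\sum_{x_i=a}\xi_i$, which comes from indices where at least one $\xi_i\neq 0$ by definition of the trace. Polarization of $WF(u)$ gives $\eta\in -E^+_g\cup\{0\}$, whereas strict polarization of $WF(v)$ gives the trace $-\eta$ of $-p$ at $a$ in $-E^+_g$, hence $\eta\in E^+_g$. Combining, $\eta\in E^+_g\cap(-E^+_g\cup\{0\})=\emptyset$, since $E^+_g\cap -E^+_g=\emptyset$ and $0\notin E^+_g$; contradiction.

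For the polarization of $WF(uv)$, the pieces $WF(u)$ and $WF(v)$ are polarized (strictly in the ``moreover'' case) by hypothesis, so only the sum $WF(u)+WF(v)$ is substantial. Given $q=p_u+p_v$ with $p_u\in WF(u)$, $p_v\in WF(v)$ at a common base point in $M^n$, pick $a$ maximal in $\pi(\Tr(q))$, then pick $b\geqslant a$ that is maximal in $B:=\pi(\Tr(p_u))\cup\pi(\Tr(p_v))$. If $b>a$, then $b\notin\pi(\Tr(q))$ by maximality of $a$, which forces the fiberwise cancellation $\xi^u_i+\xi^v_i=0$ for every index $i$ with $x_i=b$; applying the argument of the previous paragraph at base point $b$, using polarization of $u$ at the $B$-maximal point $b$ and strict polarization of $v$, yields a contradiction. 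Hence $b=a$, the trace of $q$ at $a$ equals $\eta^u_a+\eta^v_a$ (with an undefined summand read as $0$), and stability of $-E^+_g\cup\{0\}$ under addition gives $\eta^u_a+\eta^v_a\in -E^+_g\cup\{0\}$; in the ``moreover'' case, strict polarization of both $u$ and $v$ upgrades each non-trivial summand, hence the sum, to a nonzero element of $-E^+_g$.

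The main obstacle is the combinatorial interplay between the trace operation and addition: a point $b\in B$ may drop out of $\Tr(q)$ because of fiberwise cancellation between $p_u$ and $p_v$, so maximality of $a$ in $\pi(\Tr(q))$ does not automatically propagate to maximality in $B$. The resolution is that such cancellation is exactly the scenario forbidden by the strict polarization of $v$, mimicking the H\"ormander compatibility step one level deeper; this is the single place where the asymmetric hypothesis ``strict polarization of at least one factor'' is genuinely used.
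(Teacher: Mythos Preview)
Your proof is correct and follows the same two-step architecture as the paper: first verify H\"ormander's condition $WF(u)\cap(-WF(v))=\emptyset$ by looking at a maximal point of the common trace projection, then analyse the polarization of $WF(u)+WF(v)$. The difference lies in the combinatorial heart of Step~2. Writing $A=\pi(\Tr(q))$, $B'=\pi(\Tr(p_u))$, $C'=\pi(\Tr(p_v))$, the paper asserts the identity $\max A=\max B'\cap\max C'$ and from it deduces that $WF(u)+WF(v)$ is \emph{strictly} polarized; you prove instead the weaker (and sufficient) fact that every $a\in\max A$ is maximal in $B'\cup C'$, and then handle the three possibilities $a\in B'\cap C'$, $a\in B'\setminus C'$, $a\in C'\setminus B'$ directly. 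Your route is actually the more robust one: the paper's identity can fail when $a$ lies in only one of $B',C'$ (take $n=2$, $p_u=((x_1,\xi),(x_2,0))$, $p_v=((x_1,0),(x_2,\eta))$ with $x_1>x_2$, so that $\max A=\{x_1\}$ while $\max B'\cap\max C'=\emptyset$), and in such configurations the sum $WF(u)+WF(v)$ is only polarized, not strictly---which is exactly what the theorem claims and what your case analysis delivers.
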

\begin{proof}
Step 1: we prove $WF(u)+WF(v)\cap T^\star \Omega$ 
does not meet 
the zero section.
For any element $p=(x_1,\dots,x_n;\xi_1,\dots,\xi_n)\in T^\star M^n$
we denote by $-p$ 
the element $(x_1,\dots,x_n;-\xi_1,\dots,-\xi_n)\in T^\star M^n$.
Let $p_1=(x_1,\dots,x_n;\xi_1,\dots,\xi_n)\in WF(u)$
and $p_2=(x_1,\dots,x_n;\eta_1,\dots,\eta_n)\in 
WF(v)$,
necessarily we must have 
$(\xi_1,\dots,\xi_n)\neq 0,(\eta_1,\dots,\eta_n)\neq 0$.
We will show
by a contradiction 
argument 
that the sum 
$p_1+p_2=(x_1,\dots,x_n;\xi_1+\eta_1,\dots,\xi_n+\eta_n)$
does not meet the zero section.
Assume that $\xi_1+\eta_1=0,\dots,\xi_n+\eta_n=0$
i.e. $p_1=-p_2$ then we would 
have $\xi_i=-\eta_i\neq 0$
for some $i\in\{1,\dots,n\}$ since
$(\xi_1,\dots,\xi_n)\neq 0,(\eta_1,\dots,\eta_n)\neq 0$.
We assume w.l.o.g. that $\eta_1\neq 0$, thus
$Tr(p_2)$ is non empty ! Let $B=\pi(Tr(p_1)),C=\pi(Tr(p_2))$,
we first notice $B=C$ since 
$p_2=-p_1\implies Tr(p_1)=-Tr(p_2)\implies \pi\circ Tr(p_1)=\pi\circ Tr(p_2)$.
Thus if $a$ is maximal
in $B$, $a$ is
also maximal in $C$ and we have
$$0=\sum_{x_i=a} \xi_i+\eta_i = \sum_{x_i=a} \xi_i +\sum_{x_i=a}\eta_i \in \left(E^-_{g,a}\cup\{0\}+E^-_{g,a}\right)=E^-_{g,a},   $$
where we denote $E^-_{g,a}=-E^+_{g,a}$ 
for notational clarity,
(since $p_1$ is  
polarized and $p_2$ is strictly polarized) 
contradiction ! 

 Step 2, we prove that
the set
$$\left(WF(u)+WF(v)\right) \cap T^\star \Omega$$ 
is strictly polarized.
Recall $B=\pi\circ Tr(p_1)$, $C=\pi\circ Tr(p_2)$
and we denote by $A=\pi\circ Tr(p_1+p_2)$
hence in particular $A\subset B\cup C$.
We denote by $\max A$ (resp $\max B,\max C$) the set of maximal elements in $A$
(resp $B,C$). 
The key argument is to prove that $\max A=\max B\cap \max C$.
Because if $\max A=\max B\cap \max C$ holds
then for any $a\in \max A$, $\sum_{x_i=a} \xi_i+\eta_i=\sum_{x_i=a} \xi_i+\sum_{x_i=a}\eta_i\in -E^+_{g,a}$ since $a\in \max B\cap \max C$ and $Tr(p_1)$ is
a reduced polarized part and $Tr(p_2)$ is reduced strictly polarized. 
Thus $\max A=\max B\cap \max C$ implies that 
$p_1+p_2$ is strictly polarized.\\ 
We first establish the inclusion $\left(\max B\cap \max C\right)\subset \max A$. 
Let $a\in \max B\cap \max C$, then $\sum_{x_i=a} \xi_i\in E^-_{g,a}\cup\{0\}$ and 
$\sum_{x_i=a} \eta_i\in E^-_{g,a}$. 
Thus  
$\sum_{x_i=a} \xi_i+\eta_i\in E^-_{g,a}\implies \sum_{x_i=a} \xi_i+\eta_i\neq 0$ 
so there must exist some $i$
for which $x_i=a$ and $\xi_i+\eta_i\neq 0$. Hence $a\in A$. Since $A\subset B\cup C$, $a\in \max B\cap \max C$, 
we deduce that $a\in\max A$ (if there were $\tilde{a}$ in $A$
greater than $a$ then $\tilde{a}\in B$ or $\tilde{a}\in C$ and $a$
would not be maximal in $B$ and $C$).

 We show the converse inclusion $\max A\subset\left(\max B\cap \max C\right)$ by contraposition.
Assume $a\notin \max B$, then there exists $x_{j_1}\in \max B$ 
s.t.
$x_{j_1}>a$ 
and $\xi_{j_1}\neq 0$. There are two cases
\begin{itemize}
\item either $x_{j_1}\in \max C$ as well, then
$\sum_{x_{j_1}=x_i}\xi_i+\eta_i\in -E^{+}_{g,x_{j_1}}
\implies \sum_{x_{j_1}=x_i}\xi_i+\eta_i\neq 0$
and there is some $i$ for which
$x_i=x_{j_1}$ and $\xi_i+\eta_i\neq 0$ thus 
$x_{j_1}\in A$ and $x_{j_1}>a$ hence $a\notin\max A$.
\item or $x_{j_1}\notin \max C$ 
then there exists $x_{j_2}\in \max C$ s.t. $x_{j_2}>x_{j_1}$ and $\eta_{j_2}\neq 0$.
Since $x_{j_1}\in \max B$,  
we must have $\xi_{j_2}=0$ so that $x_{j_2}\notin B$. 
But we also have $\xi_{j_2}+\eta_{j_2}=\eta_{j_2}\neq 0$ 
so that $x_{j_2}\in A$.
Thus $x_{j_2}\in A$ is greater than $a$ 
hence $a\notin\max A$.
\end{itemize}
We thus proved
$$a\notin\max B \implies a\notin \max A $$
and 
by symmetry of
the above arguments
in $B$ and $C$, 
we also have
$$a\notin\max C\implies a\notin \max A .$$
We established that $(\max B)^c\subset (\max A)^c$ and $(\max C)^c\subset (\max A)^c$,
thus $(\max B)^c\cup (\max C)^c\subset  (\max A)^c$ therefore 
$\max A\subset \max B\cap \max C$,
from which we deduce the equality $\max A=\max B\cap \max C$ which implies that 
$WF(u)+WF(v)$ is strictly polarized and $WF(uv)$ is polarized.
\end{proof}
\begin{lemm}\label{lemm1}
For all $I\sqcup I^c=[n]$, $(k_i)_{i\in I},(k_j)_{j\in I^c}$ 
s.t. 
$\sum_{i\in I} k_i=\sum_{j\in I^c} k_j$ 
the Laplace coupling
$$\left(\prod_{i\in I}\phi^{k_i}(x_i) \vert \prod_{j\in I^c}\phi^{k_j}(x_j) \right) $$
is well defined
in the sense of distributions of $\mathcal{D}^\prime\left(M_{I,I^c}\right)$
and 
its wave front set
is strictly polarized.
\end{lemm}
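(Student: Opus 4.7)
The plan is to expand the Laplace coupling as a finite sum of products of Wightman propagators and reduce the lemma to showing each such product has strictly polarized wave front set on $M_{I,I^c}$. Concretely, by coassociativity and multiplicativity of $(\cdot|\cdot)$ and the fundamental identity $(\phi^k(x)|\phi^l(y))=\delta_{kl}k!\Delta_+^k(x,y)$, a routine combinatorial argument (already used in the Feynman graph formula of Proposition~6.1.6) shows that
\[
\left(\prod_{i\in I}\phi^{k_i}(x_i)\,\Big|\,\prod_{j\in I^c}\phi^{k_j}(x_j)\right)
= \sum_{(m_{ij})} C_{(m_{ij})} \prod_{(i,j)\in I\times I^c}\Delta_+^{m_{ij}}(x_i,x_j),
\]
where $(m_{ij})$ ranges over non-negative integer matrices with prescribed row sums $k_i$ and column sums $k_j$, and $C_{(m_{ij})}>0$. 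Since the set of strictly polarized conic subsets of $T^\bullet M^n$ is stable under finite unions, it suffices to treat a single product $\prod_{(i,j)}\Delta_+^{m_{ij}}(x_i,x_j)$.

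First I would analyse each factor $\Delta_+^{m_{ij}}(x_i,x_j)$ restricted to $M_{I,I^c}$. By Theorem \ref{Wavefrontpullback} and the discussion preceding Proposition \ref{t2polar}, any non-zero element $(x_i,x_j;\xi_i,\xi_j)$ of $WF(\Delta_+(x_i,x_j))$ away from $d_2$ satisfies: $x_i$ and $x_j$ are joined by a null geodesic, $\xi_j\in E_{g,x_j}^+$, $\xi_i\in -E_{g,x_i}^+$, and $-\xi_i$ and $\xi_j$ are coparallel along this geodesic. On $M_{I,I^c}$ one has $x_i\nleqslant x_j$, hence $x_i\neq x_j$; combined with the null relation this forces $x_j<x_i$ strictly. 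Thus $x_i$ is the unique maximal element of $\{x_i,x_j\}$, and the covector at this maximal point lies in $-E_{g,x_i}^+$. Taking $m_{ij}$-fold sums of such wave front elements keeps us inside the closed convex cone $-E_{g,x_i}^+$ (which does not contain the zero covector since the energy component is strictly negative), so $WF(\Delta_+^{m_{ij}}(x_i,x_j))|_{M_{I,I^c}}$ is strictly polarized in the sense of Definition \ref{polarized}.

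Next I would conclude by induction on the number of factors using Theorem \ref{lemm2}. Enumerate the pairs $(i,j)\in I\times I^c$ with $m_{ij}\geqslant 1$ in any order and form the partial products one factor at a time. At each step the previous partial product is strictly polarized by induction, the new factor is strictly polarized by the previous paragraph, and the ``if $WF(u)$ is also strictly polarized'' clause of Theorem \ref{lemm2} then provides simultaneously the existence of the product in $\mathcal{D}^\prime(M_{I,I^c})$ and strict polarization of the resulting wave front set. The base case is immediate and the inductive step exactly matches the hypotheses of Theorem \ref{lemm2}.

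The main conceptual obstacle is the strict polarization of the individual factor: one must use the causal condition $x_i\nleqslant x_j$ defining $M_{I,I^c}$ in an essential way to identify $x_i$ as the maximal element and to guarantee that the surviving covector has negative energy. Without this causal input, $WF(\Delta_+(x_i,x_j))$ would only be polarized (possibly with maximal element on either side), and the inductive step would break down since Theorem \ref{lemm2} requires at least one of the two factors to be strictly polarized at each multiplication. This is precisely where the geometric Stora-type partition of $M^n\setminus d_n$ into the subsets $M_{I,I^c}$ enters the analytic argument.
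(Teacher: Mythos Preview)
Your proposal is correct and follows essentially the same approach as the paper: expand the Laplace coupling as a finite sum of products $\prod_{(i,j)}\Delta_+^{m_{ij}}(x_i,x_j)$, observe that each factor has strictly polarized wave front set on $M_{I,I^c}$ (this is exactly the content of the Corollary established just before the lemma), and then conclude by repeated application of Theorem \ref{lemm2}. Your write-up is in fact more explicit than the paper's, which compresses the induction into a single invocation of Theorem \ref{lemm2}.
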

\begin{proof}
First
the coupling 
$\left(\prod_{i\in I}\phi^{k_i}(x_i) \vert \prod_{j\in I^c}\phi^{k_j}(x_j) \right) $
is a finite sum 
of terms
of the form
$\underset{(i,j)\in I\times I^c}{\prod}\Delta_+^{m_{ij}}(x_i,x_j),m_{ij}\in\mathbb{N}$.
However 
$$WF(\underset{(i,j)\in I\times I^c}{\prod}\Delta_+^{m_{ij}}(x_i,x_j)|_{M_{I,I^c}})$$
is strictly polarized
by application of lemma
\ref{lemm2} 
since $WF(\Delta_+^{m_{ij}}|_{M_{I,I^c}})$
is strictly polarized.
%We denote by
%$p=(x_1,\dots,x_n; \eta_1,\dots,\eta_n)$
%the element $\underset{(i,j)\in A\times B}{\sum}p_{ij}$.
%Let $a$ be maximal 
%in $\pi(Tr(p))$ then $\exists i\in I, x_i=a$.
%Furthermore for all $i\in I$ s.t.
%$x_i=a$,
%$$\eta_i=\sum_{j\in B}\eta_{i}^{ij}\in 
%\sum_{j\in B} \left(-E^+_{g,a}\right)=\left(-E^+_{g,a}\right) $$
%which means $\eta=\sum_{x_i=a}\eta_i\in \left(-E^+_{g,a}\right)$
%and $Tr(p)$ is strictly
%polarized. 
\end{proof}
\begin{figure} %on ouvre l'environnement figure
\begin{center}
\includegraphics[width=10cm]{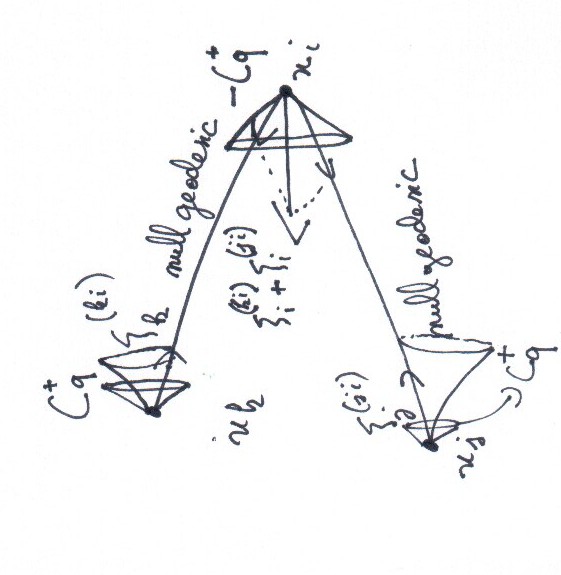} %ou image.png, .jpeg etc.
\caption{The Wavefront of Laplace couplings is strictly polarized.} %la légende
\label{Wavefront_Laplace_Coupling} %l'étiquette pour faire référence à cette image
\end{center}
\end{figure} %on ferme l'environnement figure 
\begin{lemm}\label{lemm4}
Let $t_I,t_{I^c}$
be in $\mathcal{D}^\prime(M^I),\mathcal{D}^\prime(M^{I^c})$
respectively
s.t. $WF(t_I)$
and 
$WF(t_{I^c})$
are polarized
then $WF(t_I  t_{I^c})|_{M_{I,I^c}}$
is polarized.
\end{lemm}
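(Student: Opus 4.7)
The plan is to interpret the product $t_I t_{I^c}$ as an external tensor product and then reduce the claim to a careful combinatorial analysis of the trace at maximal points. First, since $t_I\in\mathcal{D}^\prime(M^I)$ and $t_{I^c}\in\mathcal{D}^\prime(M^{I^c})$, their pull-backs under the canonical projections $M^n=M^I\times M^{I^c}\to M^I,M^{I^c}$ multiply to the usual tensor product $t_I\otimes t_{I^c}\in\mathcal{D}^\prime(M^n)$, which requires no wave-front transversality condition at all. In particular the restriction to $M_{I,I^c}$ plays no role in the existence of the product here; it only restricts which fibers of the wave front set need checking.

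Next I would invoke the standard H\"ormander bound on the wave front set of a tensor product,
\begin{equation*}
WF(t_I\otimes t_{I^c})\subset \bigl(WF(t_I)\times WF(t_{I^c})\bigr)\cup \bigl(WF(t_I)\times (\supp t_{I^c}\times\{0\})\bigr)\cup \bigl((\supp t_I\times\{0\})\times WF(t_{I^c})\bigr),
\end{equation*}
and pick an arbitrary point $p=(x_I,x_{I^c};\xi_I,\xi_{I^c})$ of the right-hand side. Set $B=\pi\circ Tr(x_I;\xi_I)$ (understood as empty if $\xi_I=0$), $C=\pi\circ Tr(x_{I^c};\xi_{I^c})$, and $A=\pi\circ Tr(p)\subset M$. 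The key combinatorial observation is that $A=B\cup C$: a point $a\in M$ belongs to $A$ precisely when at least one index $i\in I\cup I^c$ with $x_i=a$ carries a nonzero covector, and this happens iff $a\in B$ or $a\in C$.

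To finish, I would verify the polarization condition at an arbitrary maximal $a\in A$, splitting into three cases. If $a\in B\setminus C$, then $a$ remains maximal in $B$ (being maximal in $A\supset B$), so by polarization of $WF(t_I)$ we have $\sum_{i\in I,\,x_i=a}\xi_i\in -E^+_{g,a}\cup\{0\}$; no $I^c$-index over $a$ contributes a nonzero covector since $a\notin C$. The case $a\in C\setminus B$ is symmetric. If $a\in B\cap C$, then maximality of $a$ in $A$ forces $a$ to be maximal in both $B$ and $C$, so each partial sum already lies in $-E^+_{g,a}\cup\{0\}$; stability of the closed convex cone $-E^+_{g,a}\cup\{0\}$ under addition yields the same for the total. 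Hence $Tr(p)$ is a reduced polarized part, so $WF(t_I t_{I^c})$ is polarized.

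The only point requiring care is the cross case $a\in B\cap C$, where one must notice that maximality in the union forces simultaneous maximality in both pieces before summing. Beyond this elementary observation and the tensor-product wave-front formula, no analytic input is needed.
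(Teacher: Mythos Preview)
Your proof is correct, but it takes a slightly different route from the paper. The paper's key observation is that on $M_{I,I^c}$ one has $x_i\neq x_j$ for every $(i,j)\in I\times I^c$ (since $x_i=x_j$ would give $x_i\leqslant x_j$, contradicting the definition of $M_{I,I^c}$). Consequently the sets $B=\pi\circ Tr(x_I;\xi_I)$ and $C=\pi\circ Tr(x_{I^c};\xi_{I^c})$ are \emph{disjoint}, so your third case $a\in B\cap C$ is vacuous and the argument collapses to the two ``one-sided'' cases. You instead treat the general configuration and dispatch the cross case $a\in B\cap C$ by noting that maximality in $A=B\cup C$ forces maximality in both $B$ and $C$, then invoke the convexity of $-E^+_{g,a}\cup\{0\}$. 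This is correct and in fact proves the slightly stronger statement that $WF(t_I\otimes t_{I^c})$ is polarized on all of $M^n$, not merely on $M_{I,I^c}$; the paper's shortcut is precisely what the restriction to $M_{I,I^c}$ buys in terms of economy.
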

\begin{proof}
%Let us recall
%that
%$WF(t_It_{I^c})=(WF(t_I)\cup \underline{0})\times (\underline{0}\cup WF(t_{I^c}))\setminus \underline{0}$.
%The only thing that we must
%check is that $WF(t_I)\cup \underline{0}$
%is polarized but this 
%is obvious since for all
%$p\in WF(t_I)$, $p\cup \underline{0}\in WF(t_I)\cup \underline{0}$
%and $Tr(p)=Tr(p\cup \underline{0})$. The same argument
%applies to $\underline{0}\cup WF(t_{I^c})$
%and the claim of the lemma follows from
%lemma \ref{lemm3}.
For all $(x_i,x_j;\xi_{i}^I,\xi_j^{I^c})_{(i,j)\in I\times I^c}\in WF( t_I t_{I^c})|_{M_{I,I^c}}$,
$Tr(x_i,x_j;\xi_{i}^I,\xi_j^{I^c})=Tr(x_i;\xi_{i}^I)\cup Tr(x_j;\xi_j^{I^c})$
because for all $(x_1,\dots,x_n)\in M_{I,I^c}$ for all $(i,j)\in I\times I^c$, $x_i\neq x_j$.
Then using the fact that
$Tr(x_i;\xi_{i}^I)$ and $Tr(x_j;\xi_j^{I^c})$
are polarized,
for all 
$a$ maximal in 
$\pi\circ Tr(x_i,x_j;\xi_{i}^I,\xi_j^{I^c})$:

 either $a$ is maximal in  $Tr(x_i;\xi_{i}^I)$ in which
case $\eta=\sum_{x_i=a}\xi_i^I\in -E^+_{g,a}\cup\{0\}$
since $Tr(x_i;\xi_{i}^I)$ is polarized,

 either $a$ is maximal in  $Tr(x_j;\xi_{j}^{I^c})$
and we deduce the 
same kind of result
$$\eta=\sum_{x_j=a}\xi_j^{I^c}\in -E^+_{g,a}\cup\{0\}$$
since $Tr(x_j;\xi_{j}^{I^c})$ is polarized.
\end{proof}

\begin{thm}\label{goodthm1} 
Let $t_I,t_{I^c}$ be distributions
in $\mathcal{D}^\prime_{\Gamma_I},\mathcal{D}^\prime_{\Gamma_{I^c}}$
where $\Gamma_I,\Gamma_{I^c}$ are polarized
in $M^I$ and $M^{I^c}$ and $m_{ij}$
be a collection of integers.
Then the product
$$t_It_{I^c}\prod_{(ij)\in I\times I^c} \Delta_{+}^{m_{ij}}(x_i,x_j)$$
is well defined as a distribution of 
$D_{\Gamma_n}^\prime(M_{I,I^c})$ 
for 
$$\Gamma_n=\sum_{I}\left(\Gamma^0_I
+\Gamma^0_{I^c}
+\sum_{ij}\Gamma^0_{ij}\right)\bigcap T^\bullet M_{I,I^c}$$
and
$\Gamma_n$
is polarized.
Furthermore, $t_n$ defined by the relation
(\ref{key formula}) is well defined
in $\mathcal{D}^\prime(U^n\setminus d_n)$
and its wave front set is polarized in $M^n\setminus d_n$. 
\end{thm}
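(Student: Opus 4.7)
The plan is to assemble the result out of the building blocks already established: the product of two distributions whose wave front sets are in a suitable ``non-opposite'' position (Theorem \ref{productbounded}), the polarization calculus developed in Theorem \ref{lemm2}, Lemma \ref{lemm1} and Lemma \ref{lemm4}, and the partition of unity $(\chi_I)_I$ produced in Lemma \ref{newStoralemma}.

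First I would fix a pair $(I,I^c)$ and work on the open set $M_{I,I^c}\subset M^n\setminus d_n$ provided by the improved geometrical lemma. On this open set, Lemma \ref{lemm1} gives that the Laplace coupling
\begin{equation*}
L_{I,I^c}:=\prod_{(i,j)\in I\times I^c}\Delta_+^{m_{ij}}(x_i,x_j)
\end{equation*}
is well defined in $\mathcal{D}^\prime(M_{I,I^c})$ and has \emph{strictly} polarized wave front set. On the other hand the external product $t_I\otimes t_{I^c}$ lives on $M^I\times M^{I^c}=M^n$ with wave front set contained in $(WF(t_I)\times\{0\})\cup(\{0\}\times WF(t_{I^c}))\cup(WF(t_I)\times WF(t_{I^c}))$, and a direct inspection (essentially Lemma \ref{lemm4}, restricted to $M_{I,I^c}$) shows that this set is polarized because $\Gamma_I$ and $\Gamma_{I^c}$ are polarized and on $M_{I,I^c}$ the $I$- and $I^c$-points are disjoint, so their traces add up as disjoint unions.

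The key step is then to multiply $t_I\otimes t_{I^c}$ by $L_{I,I^c}$ on $M_{I,I^c}$. To apply Theorem \ref{productbounded} (or, equivalently, the classical Hörmander product criterion) one must check that
\begin{equation*}
WF(t_I\otimes t_{I^c})\cap\bigl(-WF(L_{I,I^c})\bigr)\cap T^\bullet M_{I,I^c}=\emptyset.
\end{equation*}
This is exactly the content of Step 1 of the proof of Theorem \ref{lemm2}: a polarized element and a strictly polarized element cannot be opposite, because taking a maximal $a$ in the projected trace would force a nontrivial vector of $-E^+_{g,a}$ to equal zero. Hence the product is well defined, and Step 2 of Theorem \ref{lemm2} shows that the resulting wave front set is polarized, contained in $WF(t_I\otimes t_{I^c})\cup WF(L_{I,I^c})\cup\bigl(WF(t_I\otimes t_{I^c})+WF(L_{I,I^c})\bigr)\cap T^\bullet M_{I,I^c}$, which is the piece of $\Gamma_n$ coming from the index $I$.

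To conclude, I would patch the local pieces together using the partition of unity $(\chi_I)_I$ from Lemma \ref{newStoralemma}. Since each $\chi_I$ lies in $C^\infty(M^n\setminus d_n)$, multiplication by $\chi_I$ leaves the wave front set unchanged on $M^n\setminus d_n$, and the resulting sum
\begin{equation*}
t_n=\sum_I \chi_I\, t_I\, t_{I^c}\, L_{I,I^c}
\end{equation*}
is a well defined element of $\mathcal{D}^\prime(M^n\setminus d_n)$. The sheaf-theoretic consistency of the causality equation on the overlaps $M_{I_1,I_1^c}\cap M_{I_2,I_2^c}$, already verified in the previous section, guarantees that $t_n$ does not depend on the chosen partition. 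Finally, since a finite union of polarized sets is polarized, $WF(t_n)$ is polarized in $M^n\setminus d_n$. The main obstacle throughout is the careful bookkeeping with the trace operation and the partial order on maximal elements needed to verify the ``non-opposite'' condition at each step; once Theorem \ref{lemm2} is in hand this reduces to iterating the polarized $\times$ strictly polarized calculus.
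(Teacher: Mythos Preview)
Your proposal is correct and follows essentially the same route as the paper: use Lemma~\ref{lemm4} to get that $WF(t_I t_{I^c})$ is polarized on $M_{I,I^c}$, Lemma~\ref{lemm1} to get that the Laplace coupling is strictly polarized there, then Theorem~\ref{lemm2} to multiply them, and finally patch with the smooth partition $(\chi_I)_I$ and observe that a finite union of polarized sets is polarized. Your added remarks on the tensor-product wave front set and on the consistency across overlaps are fine but not needed for the statement as written.
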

\begin{proof}
$WF(t_It_{I^c})$
is polarized in $M_{I,I^c}$ 
by Lemma
\ref{lemm4}, each Laplace coupling
is strictly polarized in $M_{I,I^c}$
by Lemma \ref{lemm1} hence by 
Theorem \ref{lemm2}
the product
$$t_It_{I^c}\prod_{(ij)\in I\times I^c} \Delta_{+}^{m_{ij}}(x_i,x_j)$$
exists and its wave front set is polarized over $M_{I,I^c}$.
We sum and multiply each term $\sum t_I(A_{I1})t_{I^c}(A_{I^c1})(A_{I2}|A_{I^c2})$ by the functions
$\chi_I$ of the partition of unity from the geometrical lemma 
which does not affect the wave front set
since they are 
smooth
on $M^n\setminus d_n$, 
thus the wave 
front set of $t_n$ defined by (\ref{key formula}) 
is the finite union
of polarized conical subsets
thus polarized.
\end{proof}

\subsection{Localization and enlarging the polarization.} 
In the previous
part,
we were able to justify
the products
of distributions 
on $M^n\setminus d_n$ in equation
\ref{key formula}
but have not yet
extended 
the distribution $t_n$
on $M^n$.
The goal
of this part 
is to
prove that we can
construct some
polarized cone $\Gamma_I$,
slightly larger than $WF(t_I)$,
which is scale invariant for 
some family
of linear Euler vector fields
and satisfies the soft landing condition.
The 
drawback of working
with the cone $E^+_g\subset T^\bullet U$
is that the cones $E^+_{gx}\subset T_x^\bullet U$  
depend
on the point $x$.
We will construct a larger closed
convex conic 
$E^+_q$ for a constant metric
$q$ which contains $E^+_g$
 and which has fibers
$E^+_{qx}$ that 
do not depend 
on $x\in U$.

We identify an open set $\Omega\subset M$ with $U\subset \mathbb{R}^d$, 
in $U$ the metric reads $g$.  
Then we soften the poset relation 
in a similar way to the step 
2 and 3 in the proof of the improved geometrical lemma 
(\ref{newStoralemma}). 
We use a constant metric 
$Q$ to define a 
new partial order denoted
by $\tilde{\leqslant}$.
Recall $E_g^+ \subset T^\star M$ is 
the subset of elements in cotangent space of positive energy. 
We prove a lemma 
which says we can localize in a domain 
$U\subset \mathbb{R}^d$ 
in which we can control the wave front set of the family $(\Delta_+)_\lambda$, $$\forall\lambda\in(0,1],WF\left(\Delta_{+\lambda}\right)\subset \left(-E_q^+ \right) \times \left(E^+_q\right) $$ 
by a scale and translation invariant set $E_q^+$ 
living in cotangent space $T^\bullet U$. 

\begin{lemm}
For any $x_0\in U$, 
we can always make 
$U$ smaller
around $x_0$ so as to be able
to construct a \textbf{closed conic convex} 
set $E_q^+\subset T^\bullet U$ s.t. $E_g^+\subset E^+_q$, $E^+_q$ does not depend on $x\in U$ and such that $E_q^+\cap -E_q^+=\emptyset$. 
\end{lemm}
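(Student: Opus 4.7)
The plan is to mimic the construction used in the improved geometrical lemma (Lemma \ref{newStoralemma}), but now in cotangent rather than position space, exploiting the continuous dependence of the metric on the base point. First I would pick a chart around $x_0$ such that $g_{\mu\nu}(x_0)=\eta_{\mu\nu}$; then the dual metric $g^{\mu\nu}(x_0)$ coincides with the Minkowski $\eta^{\mu\nu}=\mathrm{diag}(1,-1,\dots,-1)$, and the cone $E^+_{g,x_0}$ is the standard future timelike--null cone in $T^\star_{x_0}M$. Writing $\xi=(\xi_0,\overrightarrow{\xi})$, the idea is to define a constant dual ``metric''
\begin{equation*}
q^{\mu\nu}\xi_\mu\xi_\nu \;=\; \xi_0^2 - c^{-2}|\overrightarrow{\xi}|^2
\end{equation*}
with some $0<c<1$, and to set
\begin{equation*}
E^+_q \;=\; \{(x;\xi)\in T^\bullet U \,|\, q^{\mu\nu}\xi_\mu\xi_\nu \geqslant 0,\; \xi_0>0\}.
\end{equation*}
Because $c<1$, the relation $\xi_0^2\geqslant c^{-2}|\overrightarrow{\xi}|^2$ defines a closed convex proper cone in $\mathbb{R}^{n+1}\setminus\{0\}$ that is strictly narrower than a half-space, so that $E^+_q\cap -E^+_q=\emptyset$ is automatic; moreover the fibres of $E^+_q$ do not depend on the base point $x$ by construction.

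Next I would prove the inclusion $E^+_g\subset E^+_q$ by shrinking $U$. The smooth function $(x,\xi)\mapsto g^{\mu\nu}(x)\xi_\mu\xi_\nu$ converges uniformly, on the compact unit sphere $|\xi|=1$, to $\eta^{\mu\nu}\xi_\mu\xi_\nu$ as $x\to x_0$. Hence there exists a neighbourhood of $x_0$ (which, after possibly shrinking, I call again $U$) such that for every $x\in U$ and every $\xi\in\mathbb{R}^{n+1}\setminus\{0\}$ with $\xi_0>0$, the implication
\begin{equation*}
g^{\mu\nu}(x)\xi_\mu\xi_\nu \geqslant 0 \;\Longrightarrow\; \xi_0^2 - c^{-2}|\overrightarrow{\xi}|^2 \geqslant 0
\end{equation*}
holds. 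This is exactly the argument already carried out in the proof of Lemma \ref{newStoralemma} (see the estimate (\ref{conecont})), transposed to momentum space by duality: the dual cone of a cone slightly narrower than the Minkowski cone is slightly wider, so a single $c<1$ uniformly works once $U$ is small enough.

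The main obstacle, and the only point that deserves care, is ensuring that the choice of $c$ can be made \emph{uniformly} in $x\in U$ while still keeping $c<1$ strictly; in other words, the two requirements ``$E^+_{g,x}\subset E^+_q$ for every $x\in U$'' and ``$E^+_q$ proper'' must be reconciled. The uniform continuity of $g$ on the compact closure of $U$, together with the fact that both conditions degenerate continuously to an equality in the limit $x\to x_0, c\to 1$, allows one to pick first $c\in(0,1)$ and then $U$ (both depending only on $x_0$) so that both conditions hold. The resulting $E^+_q$ is then, by inspection, a closed conic convex subset of $T^\bullet U$ with constant fibre, containing $E^+_g|_U$, and satisfying $E^+_q\cap -E^+_q=\emptyset$, which proves the claim.
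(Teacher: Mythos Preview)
Your overall strategy coincides with the paper's proof: normalize the chart so that $g^{\mu\nu}(x_0)=\eta^{\mu\nu}$, define a constant Lorentzian form $q$, set $E_q^+=\{(x;\xi)\mid q(\xi,\xi)\geqslant 0,\ \xi_0>0\}$, and use continuity of $g$ to shrink $U$. The paper takes $q=\eta^{\mu\nu}+\lambda^2\delta^{00}$ with $\lambda$ large.

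However, there is a genuine error in your choice of $q$. With $q(\xi,\xi)=\xi_0^2-c^{-2}|\overrightarrow{\xi}|^2$ and $0<c<1$, the cone $\{\xi_0^2\geqslant c^{-2}|\overrightarrow{\xi}|^2,\ \xi_0>0\}$ is \emph{strictly narrower} than the Minkowski future cone, not wider. Already at the base point $x_0$, where $E^+_{g,x_0}$ is the Minkowski cone $\{\xi_0^2\geqslant |\overrightarrow{\xi}|^2,\ \xi_0>0\}$, a null covector such as $\xi=(1,1,0,\dots,0)$ lies in $E^+_{g,x_0}$ but satisfies $1-c^{-2}<0$, so it is \emph{not} in your $E^+_q$. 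Hence the inclusion $E_g^+\subset E_q^+$ fails before any shrinking of $U$ can help. The displayed implication
\[
g^{\mu\nu}(x)\xi_\mu\xi_\nu\geqslant 0 \ \Longrightarrow\ \xi_0^2-c^{-2}|\overrightarrow{\xi}|^2\geqslant 0
\]
is simply false at $x=x_0$ for $c<1$.

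Your appeal to duality with Lemma~\ref{newStoralemma} is also in the wrong direction: there one builds a \emph{wider} cone $\tilde C$ in position space; its polar in momentum space would be \emph{narrower}, which is the opposite of what is needed here. The present lemma requires an independent enlargement of the momentum cone. The fix is immediate: take $c>1$ (or replace $c^{-2}$ by $c^2$ with $c<1$), so that $E_q^+$ has wider aperture than Minkowski; then the continuity argument and the remaining properties go through exactly as you describe, and you recover the paper's proof.
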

\begin{proof}
We enlarge 
the cone of positive energy $E_g^+\subset T^\bullet U$.
Recall we defined $E_g^+$ as $E_g^+=\{(x;\xi) |  g_x(\xi,\xi)\geqslant 0,\xi_0>0  \}\subset T^\bullet M$. 
But the drawback of this definition 
lies in the fact that the fibers $E_{gx}^+$ of the set $E_g^+$ depend on the base point $x$ since $g$ is variable.
We localize the construction in a sufficiently small open ball $U$ in $\mathbb{R}^{d}$ and pick a constant metric $q$ on this ball $U$ in such a way that 
\begin{equation}\label{ineq}
\forall x \in U, g_x(\xi,\xi)\geqslant 0,\xi_0>0\implies   q(\xi,\xi)>0. 
\end{equation} 
Such a metric is easy to construct, 
following the arguments of the proof of the 
improved geometrical lemma, 
we assume $g_{x_0}^{\mu\nu}=\eta^{\mu\nu}$ and by setting
$q=\eta^{\mu\nu}+\lambda^2\delta^{00}$, we can always choose $\lambda$ large enough so that the inequality (\ref{ineq}) is satisfied for all $x\in U$.
\begin{defi}
We set $E_q^+=\{(x,\xi) | q(\xi,\xi)\geqslant 0,\xi_0>0,x\in U \}$.
\end{defi}
It is immediate by construction 
that our new 
\textbf{closed, conic, convex} 
set $E_q^+\subset T^\bullet M$ 
contains the old set $E^+_g$. 
It is also obvious by construction that $E_q^+$ is both \emph{scale and translation invariant} in $U$, since the metric $q$ is constant in $\mathbb{R}^{d}$.
\begin{figure} %on ouvre l'environnement figure
\begin{center}
\includegraphics[width=10cm]{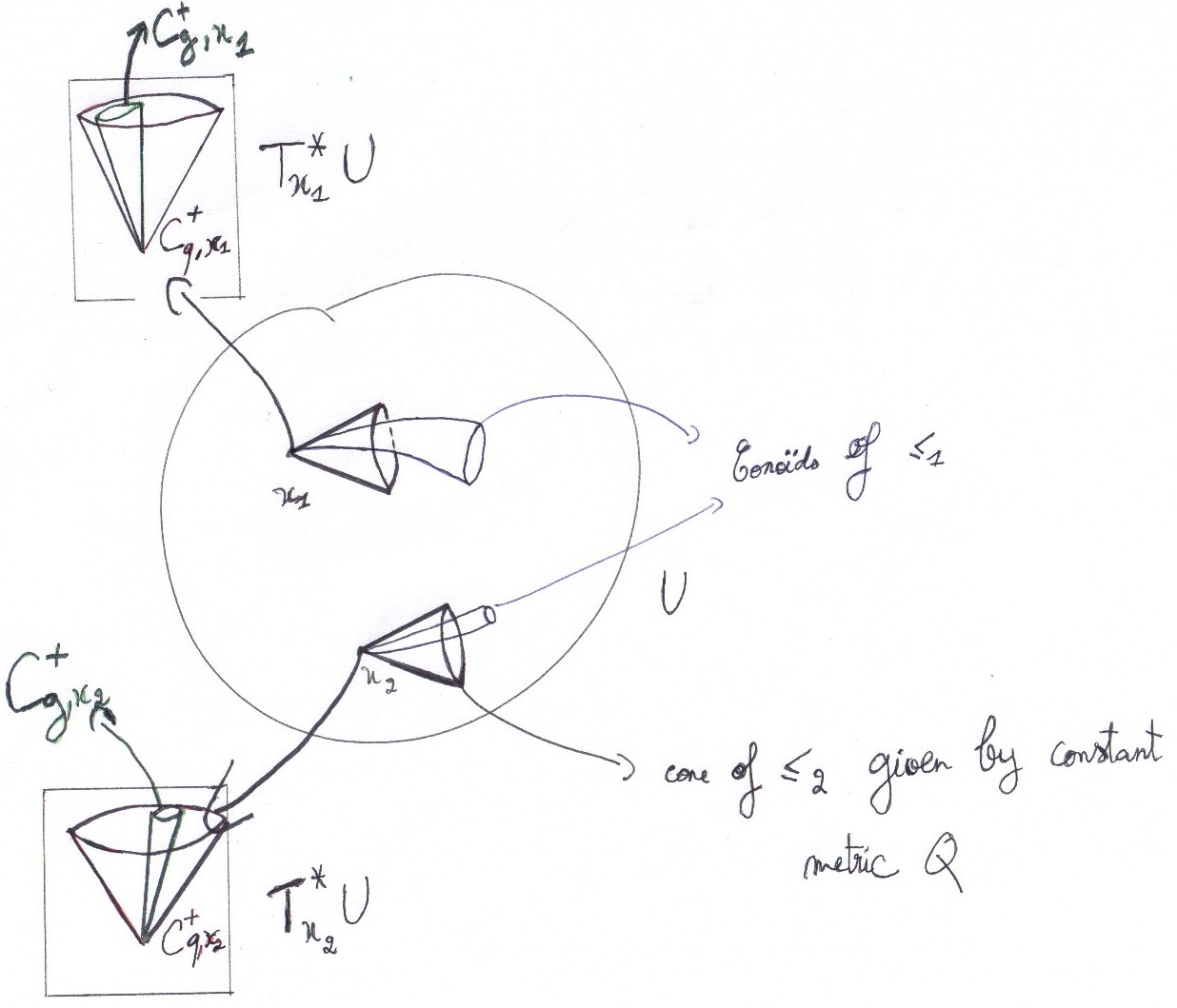} %ou image.png, .jpeg etc.
\caption{Picture of the new poset structure together with the new polarization.} %la légende
\label{Soften_metrics} %l'étiquette pour faire référence à cette image
\end{center}
\end{figure} %on ferme l'environnement figure 
\end{proof}

 We have a new definition of polarization by applying Definition (\ref{polarized}) for 
the new conic set $E_q^+$ and the partial order $\tilde{\leqslant}$ ( $\tilde{\leqslant}$ affects the choices of maximal points). Hence the metric $Q$ 
controls the order relation $\tilde{\leqslant}$
and exploits the finite propagation speed
of light, 
whereas the metric $q$ controls the
cone of positive energy.
\paragraph{Scaling in configuration spaces.}
On $U^I$, we denote the coordinates by
$(x_i)_{i\in I}$, then we define the collection $\rho_{x_i},i\in I$ of $\vert I\vert$ linear Euler fields $\rho_{x_i}=\sum_{j\neq i,j\in I}(x_j-x_i)\frac{\partial}{\partial x_j}$. $\rho_{x_i}$ scales relative to the element $x_i$ in configuration space $U^I$.
\begin{ex}
In $U^n$, the vector field
$\sum_{j\neq 1}(x_j-x_1)\partial_{x_j}$ is Euler since
$\left(\sum_{j\neq 1}(x_j-x_1)\partial_{x_j} (x_i-x_1)\right)-(x_i-x_1)=(x_i-x_1)-(x_i-x_1)=0$
and this implies that 
$\sum_{j\neq 1}(x_j-x_1)\partial_{x_j}f-f\in\mathcal{I}^2$ 
for all 
$f\in\mathcal{I}$
the ideal of functions vanishing on $d_n$.
If we scale by
$f_\lambda(x_1,\dots,x_n)=f(x_1,\lambda(x_2-x_1)+x_1,\dots,\lambda(x_n-x_1)+x_1)$
then 
this corresponds to
the Euler
vector field
$\sum_{j=1}^{n-1} h_j\frac{\partial}{\partial h^j}$.
The cotangent lift of this vector field
equals 
$$ \sum_{j\neq 1} (x_j-x_1)\partial_{x_j} -\xi_j(\partial_{\xi_j}-\partial_{\xi_1}) .$$
The vector field
$\sum_{j=2}^n\xi_j(\partial_{\xi_j}-\partial_{\xi_1})$ 
corresponds to
the system
of ODE's
$$\forall j\geqslant 2, \frac{d\xi^j}{dt}=\xi^j,\frac{d\xi^1}{dt}=\sum_{j=2}^{n}\xi^j,  $$
thus integrating the vector field
$ \sum_{j\neq 1} (x_j-x_1)\partial_{x_j} -\xi_j(\partial_{\xi_j}-\partial_{\xi_1})$
in cotangent space yields
the flow :
$$ (x_1,\lambda(x_2-x_1)+x_1,\dots;\xi_1+(1-\lambda^{-1})\sum_{j=2}^n\xi_j,\lambda^{-1}\xi_2,\dots,\lambda^{-1}\xi_n) .$$
Finally, we 
compute
the coordinate transformation
in cotangent space
which passes from regular coordinates
in cotangent space $T^\star U^n$
to the system of coordinates
$(x,h;k,\xi)$
used in Chapters 1,2,3,4:
\begin{eqnarray}\label{slccite}
(x_1,\dots,x_n;\xi_1,\dots,\xi_n)\mapsto (x,h_1,\dots,h_{n-1};k,\eta_1,\dots,\eta_{n-1})\\
x_1=x,h_j=x_{j+1}-x_1 \\
k=\sum_{i=1}^n \xi_i, \eta_j=\xi_{j+1}.
\end{eqnarray}
\end{ex}
\paragraph{The soft landing condition on configuration space.}
We saw in Chapter 2 and 3 
that the soft landing condition 
was an essential condition
on the wave front set 
of a distribution 
which allows to control the wave front set of extensions
of distributions.
Before we state the
soft landing condition
in $T^\star U^n$, 
we first give the equation
of the conormal of $d_n\subset U^n$
in coordinates $(x_1,\dots,x_n;\xi_1,\dots,\xi_n)$.
The collection $dh_1=dx_2-dx_1,\dots,dh_{n-1}=dx_n-dx_1$ 
of 1-forms spans a basis
of orthogonal forms 
to the
tangent space of $d_n$,
thus a 1-form 
$\xi_1dx_1+\dots +\xi_ndx_n$ belongs
to the conormal if it writes
$\sum_{i=2}^n a_idh_i$ for some $(a_i)_i$ which 
implies $\xi_1=-\sum_{i=2}^n \xi_i$,
thus the equation of the conormal in $U^n$
is $x_1=x_2=\dots=x_n , \xi_1+\dots+\xi_n=0$. 
If we write
the equation of the soft landing condition in $T^\bullet U^n$
for the coordinates, we obtain 
\begin{equation}\label{slcmn}
\vert \sum_{i=1}^n\xi_i\vert\leqslant\delta \left(\sum_{i=2}^n\vert x_1-x_i \vert\right)\left( \sum_{i=2}^n\vert\xi_i\vert \right)
\end{equation}
since $k= \sum_{i=1}^n\xi_i$ and $\forall i\geqslant 2,\eta_{i}=\xi_{i+1}$ by \ref{slccite}, 
the inequality \ref{slcmn} is clearly invariant by the flow
$\lambda\mapsto (x_1,\lambda(x_2-x_1)+x_1,\dots;\xi_1+(1-\lambda^{-1})\sum_{j=2}^n\xi_j,\lambda^{-1}\xi_2,\dots,\lambda^{-1}\xi_n)$.

In configuration space $T^\star U^I$ with coordinates $(x_i;\xi_i)_{i\in I}$, 
the soft landing condition takes the following form: 
a conic set $\Gamma\subset T^\bullet U^I$ satisfies the \textbf{soft landing condition} w.r.t. to $d_I$ if
for all compact set $K\subset U^I$, there exists $\varepsilon>0$ and 
$\delta>0$, such that 
\begin{equation}\label{softlandingconfig}
\Gamma|_{K\cap\{\sum_{i\in I,i\neq j}\vert x_j-x_i \vert\leqslant\varepsilon \}} \subset\{\vert \sum_{i\in I} \xi_i\vert\leqslant \delta\left(\sum_{i\in I,i\neq j}\vert x_j-x_i \vert\right)\left( \sum_{i\in I;i\neq j}\vert\xi_i\vert \right)\}.
\end{equation} 

\subsection{We have $\left(WF\left(e^{\log\lambda\rho_{x_i}*}\Delta_+\right)\bigcap T^\bullet U^2\right)\subset (-E_q^+)\times E_q^+ $.} 

% We recall two useful tools discussed in the previous chapters.
%
%\paragraph{The Euler vector fields of Hadamard, Chapter 5.} 
%Let $U$ be a geodesically convex ball and $\Gamma(x_1,x_2)=\left\langle \exp_{x_1}^{-1}(x_2) , \exp_{x_1}^{-1}(x_2) \right\rangle_{g(x_1)}$ which is well defined on $U^2$. The Hadamard Euler vector fields $\rho^H_{x_i},i=1,2$ are defined as follows
%\begin{eqnarray}
%\rho^H_{x_1}=\frac{1}{2} \nabla_{x_2} \Gamma\\
%\rho^H_{x_2}=\frac{1}{2} \nabla_{x_1} \Gamma
%\end{eqnarray}
%
%\begin{prop}
%$\rho^H_{x_i},i=1,2$ are Euler vector fields w.r.t. to the diagonal $d_2$ in $M^2$.
%\end{prop}
%
%
%\paragraph{Conjuguating two Euler vector fields on cotangent space, Chapter $1,4$.}
%
%\begin{thm}\label{Heleintrickidentity}
%Let $(\rho_1,\rho_2)\in\mathfrak{g}^2$ be two Euler vector fields
%which scale w.r.t. the thin diagonal $d_n$.
%If $\Phi(\lambda)=e^{-\log\lambda \rho_1}\circ e^{\log\lambda \rho_2} $
%then $\Phi$ depends smoothly in $\lambda\in[0,1]$
%\end{thm}
%\begin{thm}
%If $\Phi(\lambda)$ satisfies the previous hypothesis, the cotangent lift $T^\star \Phi_\lambda$
%restricted to the conormal $(Td_n)^\perp|_{d_n}\subset T^\star M^n$ is the identity map.
%\end{thm}
The next lemma aims to use our cone $E^+_q\subset T^\bullet U$ to control
the wave front set of the family $\left(e^{\log\lambda\rho_{x_i}\star}\Delta_+\right)_{\lambda\in(0,1]}, i=(1,2)$. 
\begin{lemm}
We can choose $q$ and $U$ in such a way that  $$\forall\lambda\in(0,1], \left(WF\left(e^{\log\lambda\rho_{x_i}*}\Delta_+\right)\bigcap T^\bullet U^2\right)\subset \left(-E_q^+\right)\times E_q^+ .$$
\end{lemm}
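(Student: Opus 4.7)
The plan is as follows. For concreteness I treat $i=1$; the case $i=2$ is identical after swapping the roles of $x_1$ and $x_2$. Write $\Phi_\lambda=e^{\log\lambda\,\rho_{x_1}}$ for the flow of $\rho_{x_1}=(x_2-x_1)\partial_{x_2}$, so that $\Phi_\lambda(x_1,x_2)=(x_1,x_1+\lambda(x_2-x_1))$. By H\"ormander's pullback theorem and the explicit formula for the cotangent lift $T^\star\Phi_\lambda$ computed in the example just above,
\begin{equation*}
WF(\Phi_\lambda^\star\Delta_+)=(T^\star\Phi_\lambda)^{-1}\bigl(WF(\Delta_+)\bigr)=T^\star\Phi_{\lambda^{-1}}\bigl(WF(\Delta_+)\bigr),
\end{equation*}
so every element of $WF(\Phi_\lambda^\star\Delta_+)$ has the form $(x_1,\,x_1+\lambda^{-1}(x_2-x_1);\,\xi_1+(1-\lambda)\xi_2,\,\lambda\xi_2)$ for some $(x_1,x_2;\xi_1,\xi_2)\in WF(\Delta_+)$.

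I next invoke Theorem \ref{Wavefrontpullback} to describe the source: for $(x_1,x_2;\xi_1,\xi_2)\in WF(\Delta_+)\cap T^\bullet U^2$ one has $\xi_2\in E_g^+$ and $\xi_1\in-E_g^+$, both covectors being proportional, with a common proportionality factor, to $d_1\Gamma$ and $d_2\Gamma$ along the null conoid $\{\Gamma=0\}$; on the diagonal one has $\xi_1=-\xi_2$. Since $E_g^+\subset E_q^+$ and $E_q^+$ is a cone, the scaled second covector $\lambda\xi_2$ lies in $E_q^+$ for every $\lambda\in(0,1]$, which settles half of the claim trivially.

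The substance of the argument is the control of the first covector, for which I use the decomposition
\begin{equation*}
\xi_1+(1-\lambda)\xi_2=-\lambda\,\xi_2+(\xi_1+\xi_2),
\end{equation*}
where $-\lambda\xi_2\in -E_g^+$ and $\xi_1+\xi_2$ is the curvature correction that vanishes in the flat case. The requirement that the scaled base point $x_1+\lambda^{-1}(x_2-x_1)$ still lie in $U$ forces $|x_2-x_1|\leqslant 2r\lambda$, where $r$ is the diameter of $U$. Combining this with the Taylor expansion $\Gamma(x_1,x_2)=g_{\mu\nu}(x_1)(x_2-x_1)^\mu(x_2-x_1)^\nu+O(|x_2-x_1|^3)$, which yields $|d_1\Gamma+d_2\Gamma|\leqslant C|x_2-x_1|^2$ while $|d_i\Gamma|$ is of order $|x_2-x_1|$, one obtains the uniform bound $|\xi_1+\xi_2|\leqslant 2Cr\lambda\,|\xi_2|$. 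Hence the perturbation of $-\lambda\xi_2$ in the above decomposition has relative size at most $2Cr$, independently of $\lambda\in(0,1]$.

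To conclude, I choose $U$ and $q$ as follows. The preceding lemma allows $q$ to be picked so that $E_g^+\setminus\{0\}$ lies strictly in the interior of $E_q^+$ over $U$; by compactness of the unit cosphere in each fiber there is an angular margin $\alpha>0$ between the lightlike directions of $g$ and the boundary of $E_q^+$. Shrinking $U$ until $2Cr<\sin\alpha$ ensures that $-\lambda\xi_2+(\xi_1+\xi_2)$ stays inside the convex cone $-E_q^+$, and the contribution from the diagonal piece of $WF(\Delta_+)$, where $\xi_1+\xi_2=0$, is handled automatically since then $\xi_1+(1-\lambda)\xi_2=-\lambda\eta\in -E_g^+\subset-E_q^+$. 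The hard part is exactly the curvature estimate for $d_1\Gamma+d_2\Gamma$: in the flat case we would have $\xi_1=-\xi_2$ exactly, and in general the saving factor $\lambda$ in $|\xi_1+\xi_2|\leqslant 2Cr\lambda|\xi_2|$ is what makes the bound robust uniformly as $\lambda\to 0^+$.
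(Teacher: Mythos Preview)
Your proof is correct and in fact supplies more detail than the paper's own argument. The paper disposes of this lemma in two lines: it records that $WF(\Delta_+)\cap T^\bullet U^2\subset(-E_q^+)\times E_q^+$ by construction of $E_q^+$, and then asserts that the transformed pair $(\xi_1+(1-\lambda)\xi_2,\lambda\xi_2)$ remains in $(-E_q^+)\times E_q^+$ ``by invariance and convexity of $E_q^+$''. Read literally, that convexity step is only immediate when $\xi_1+\xi_2=0$ (the diagonal piece, and the flat case): off the diagonal in a curved metric, $\xi_1+(1-\lambda)\xi_2$ is the sum of an element of $-E_q^+$ with one of $E_q^+$, and convexity of the cone says nothing about such a sum.

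Your argument supplies precisely what is needed to make this work in the curved setting: the decomposition $\xi_1+(1-\lambda)\xi_2=-\lambda\xi_2+(\xi_1+\xi_2)$, the curvature estimate $|\xi_1+\xi_2|\leqslant C|x_2-x_1|\,|\xi_2|$ (which is exactly the soft landing condition established in Lemma~\ref{Xislc}), and the key observation that the constraint $x_1+\lambda^{-1}(x_2-x_1)\in U$ forces $|x_2-x_1|\leqslant\mathrm{diam}(U)\cdot\lambda$, so that the perturbation $(\xi_1+\xi_2)$ has size $O(\lambda|\xi_2|)$ with a constant independent of $\lambda$. Shrinking $U$ then makes this perturbation small enough for the strict inclusion $E_g^+\subset\mathrm{int}\,E_q^+$ to absorb it. This is the honest content behind the paper's one-line appeal to convexity, and your write-up makes it explicit.
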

\begin{proof}
By construction of $E^+_q$,
$\left(WF\left(\Delta_+\right)\bigcap T^\bullet U^2\right)\subset \left(-E_q^+\right)\times E_q^+$.
If $(x_1;\xi_1),(x_2;\xi_2)\in -E^+_q\times E^+_q$ 
then $\forall \lambda\in(0,1], (x_1;\xi_1+(1-\lambda)\xi_2),(\lambda^{-1}(x_2-x_1)+x_1;\lambda\xi_2)\in -E^+_q\times E^+_q$ by invariance and convexity 
of $E^+_q$ which immediately 
yields the result.
\end{proof}
\subsection{The scaling properties of translation invariant conic sets.} 
The next lemma we prove also 
has a geometric flavor. 
%Let $\Gamma_I\subset T^\bullet M^I$ be a \textbf{translation invariant} conic set
%with the property that for some index $i\in I$, 
%the set $\Gamma_{I}$ is
%stable by the action of $e^{\rho_{x_i}\log\lambda}$
%then we prove that $\Gamma_I$ is stable by the flow generated by \textbf{any} linear Euler $\rho_{x_j},j\in I$. 
\begin{lemm} 
Let $\Gamma_I\subset T^\bullet M^I$ be a \textbf{translation invariant} conic set.
Then $\Gamma_I$ is stable under $e^{\log\lambda\rho_i}$ for some
$i\in I$ is equivalent
to $\Gamma_I$ is stable by $e^{\log\lambda\rho_i}$ for all
$i\in I$.
\end{lemm}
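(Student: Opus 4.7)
The plan is to work in a local chart of $M^I$ and adopt the coordinates $y_1 = x_1$ and $h_k = x_k - x_1$ for $k \neq 1$, which separate diagonal translations (acting as $(y_1, h) \mapsto (y_1 + a, h)$) from the geometry of the diagonal $d_I = \{h = 0\}$. In these coordinates, a direct computation gives $\rho_{x_1} = \sum_{k\neq 1} h_k \partial_{h_k}$ and, for any $j \neq 1$, $\rho_{x_j} = \rho_{x_1} - h_j \partial_{y_1}$. Integration yields the flows $\Phi_\lambda^1(y_1, h) = (y_1, \lambda h)$ and $\Phi_\lambda^j(y_1, h) = (y_1 + (1-\lambda)h_j, \lambda h)$, so $\Phi_\lambda^j$ is $\Phi_\lambda^1$ followed by a shift of $y_1$ by a function of $h_j$.

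Lifting to the cotangent bundle, $T^\star \Phi_\lambda^1$ sends $(y_1, h; k, \eta)$ to $(y_1, \lambda h; k, \lambda^{-1}\eta)$, while $T^\star \Phi_\lambda^j$ sends it to $(y_1 + (1-\lambda)h_j, \lambda h; k, \lambda^{-1}\eta')$, where $\eta'$ agrees with $\eta$ except in its $j$-th component, which is additionally shifted by $(1-\lambda^{-1})k$. Translation invariance of $\Gamma_I$ means, on the one hand, that $\Gamma_I$ is independent of $y_1$, and, on the other hand --- as is the case throughout the chapter, where $\Gamma_I$ arises from the wave front set of a translation-invariant Feynman amplitude --- that $\Gamma_I$ is contained in the slice $\{k = \sum_i \xi_i = 0\}$. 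On this slice the extra momentum shift $(1-\lambda^{-1})k$ vanishes identically, so $T^\star\Phi_\lambda^j$ differs from $T^\star\Phi_\lambda^1$ only by a (point-dependent) translation of $y_1$, which by the $y_1$-independence of $\Gamma_I$ preserves $\Gamma_I$.

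From this the equivalence follows: if $T^\star\Phi_\lambda^1(\Gamma_I) \subset \Gamma_I$ then the same inclusion holds for $T^\star\Phi_\lambda^j$ for every $j\in I$, and the converse is symmetric by exchanging the roles of $1$ and $j$. Pairwise application settles the equivalence between any two indices $i, j \in I$, without any distinguished role for $1$.

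The main obstacle is a subtlety of interpretation rather than of calculation: a purely positional reading of translation invariance does not suffice --- the set $\Gamma_I = \{\eta = 0,\, k > 0\}$ is $y_1$-independent and stable under $T^\star\Phi_\lambda^1$, yet the shift by $(1-\lambda^{-1})k$ throws it out of itself under $T^\star\Phi_\lambda^j$ for $j \neq 1$. The lemma is correct under the implicit hypothesis that $\Gamma_I$ lies in $\{k = 0\}$, which is automatic for wave front sets of translation-invariant distributions and therefore in all the applications needed in this chapter.
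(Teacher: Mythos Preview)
Your route is the paper's: both relate the scalings by $\rho_{x_i}$ and $\rho_{x_j}$ through the composite $\Phi(\lambda)=e^{-\log\lambda\rho_{x_i}}\circ e^{\log\lambda\rho_{x_j}}$, compute that on the base it acts as the diagonal shift $x_k\mapsto x_k+(\lambda-1)(x_j-x_i)$, and then invoke translation invariance. The paper stops there and asserts $T^\star\Phi(\lambda):(x;\xi)\mapsto(x+(\lambda-1)(x_j-x_i);\xi)$. You carry the computation of the lift out honestly in $(y_1,h;k,\eta)$ coordinates and find the extra fibre shift proportional to $k=\sum_i\xi_i$ --- and you are right: the translation vector depends on the base point, so $D\Phi(\lambda)\neq\mathrm{Id}$; in the original coordinates one gets $\xi_i\mapsto\xi_i-(\lambda-1)\sum_k\xi_k$, $\xi_j\mapsto\xi_j+(\lambda-1)\sum_k\xi_k$, the other $\xi_m$ unchanged. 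Your counterexample $\Gamma_I=\{\eta=0,\ k>0\}$ is valid and shows the lemma is false as literally stated; the paper's displayed formula for $T^\star\Phi(\lambda)$ is incorrect precisely at this step.

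Your repair (add the hypothesis $\Gamma_I\subset\{k=0\}$) makes the argument go through. One refinement worth noting: in the recursion of this chapter the cones $\Gamma_I$ are not themselves wave-front sets of translation-invariant distributions but their \emph{good-cone} thickenings, which only satisfy the soft landing bound $|k|\le\delta\,|h|\,|\eta|$ near $d_I$ rather than $k=0$ everywhere. A cleaner way around this, avoiding the lemma altogether, is to define ``scale invariant'' as stability under \emph{every} $\rho_{x_i}$ and to build $\tilde\Gamma_I$ as the orbit under the full group generated by diagonal translations and all the $\rho_{x_i}$-flows; since polarization and the soft landing condition are preserved by each of these flows separately (as the paper itself observes), the enlarged $\tilde\Gamma_I$ is still good and the later arguments go through unchanged.
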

\begin{proof}
%Let us start by noticing that the scaling relative to $a$ can be decomposed as a composition of translations and dilations:
%$$(x;\xi)\mapsto (x-a;\xi) \mapsto (\lambda(x-a);\lambda^{-1}\xi)\mapsto (\lambda(x-a)+a;\lambda^{-1}\xi) .$$
Following the approach
of Chapter $1$, we try
to find a flow $\Phi(\lambda)$
relating the two linear scalings by $\rho_{x_i}$ and $\rho_{x_j}$.
This flow is given 
by the formula 
$\Phi(\lambda)=e^{-\log\lambda\rho_{x_i}} \circ e^{\log\lambda\rho_{x_j}}$ 
and the lifted flow 
$T^\star\Phi(\lambda)$ 
on cotangent space
is given by the formula
$T^\star \Phi(\lambda)=T^\star e^{-\log\lambda\rho_{x_i}} 
\circ
T^\star e^{\log\lambda\rho_{x_j}}$.
In our specific case, for each $\lambda$,
$\Phi(\lambda)$ is 
a flow by linear translation.
The map $\Phi(\lambda)$ results
from the composition
of two \emph{scalings} relative to two elements $(x_i,x_j)$
with ratio $(\lambda,\lambda^{-1})$ 
respectively. It can be computed explicitely
$$\Phi_\lambda: x\mapsto  \lambda(x-x_i)+x_i$$
$$\mapsto\lambda^{-1}\left(\left(\lambda(x-x_i)+x_i\right)-\left(\lambda(x_j-x_i)+x_i \right) \right) + \left(\lambda(x_j-x_i)+x_i \right)$$ $$=(x-x_j)+ \left(\lambda(x_j-x_i)+x_i \right) =x+\underset{\text{translation vector}}{\underbrace{(\lambda-1)(x_j-x_i)}},$$
which proves $\Phi(\lambda)=e^{-\log\lambda\rho_{x_i}} \circ e^{\log\lambda\rho_{x_j}}$
is a translation of vector $(\lambda-1)(x_j-x_i)$. 
We also have  $T^\star \Phi(\lambda):(x;\xi)\mapsto (x+(\lambda-1)(x_j-x_i);\xi)$.
This computation proves the following fundamental fact: if a translation invariant set $\Gamma_{I}$ is stable by the cotangent lift of scaling relative to one given $a\in\mathbb{R}^d$ then $\Gamma_{I}$ is invariant by the cotangent lift of linear scalings relative to \emph{any element} $a\in\mathbb{R}^d$ which implies the claimed result.  
 \begin{figure} %on ouvre l'environnement figure
\begin{center}
\includegraphics[width=8cm]{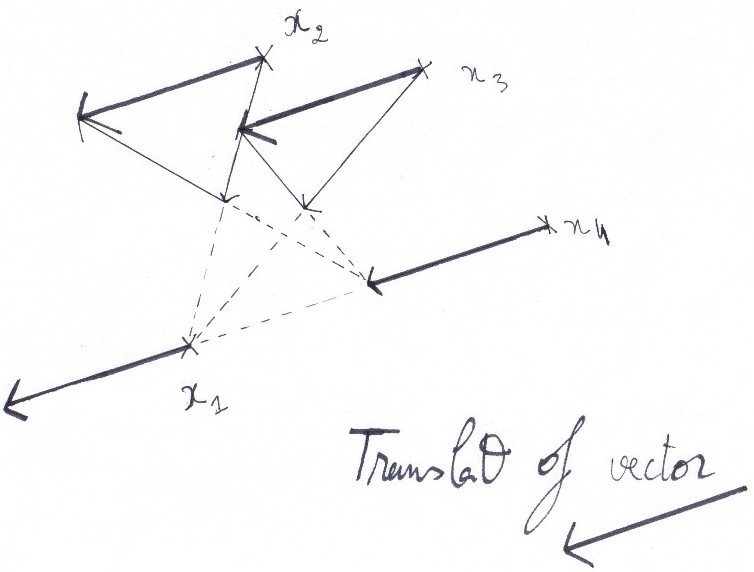} %ou image.png, .jpeg etc.
\caption{Action on configuration space $\left(\mathbb{R}^d\right)^4$ of the map $\Phi(\lambda)=e^{-\log\lambda\rho_{x_4}}\circ e^{\log\lambda\rho_{x_1}}$ for $\lambda=\frac{1}{2}$ as a translation.} %la légende
\label{Change_scaling} %l'étiquette pour faire référence à cette image
\end{center}
\end{figure} %on ferme l'environnement figure 
\end{proof} 
This lemma motivates the following definition: 
\emph{a translation invariant conic set $\Gamma_I\subset T^\bullet M^I$ is said to be 
scale invariant if it is stable by scaling w.r.t. the vector field $\rho_{x_i}$ for some 
$i\in I$}.
\subsection{Thickening sets.}
\begin{lemm}\label{thickeninglemm}
If $\Gamma_I$ satisfies the soft landing condition and is (strictly) polarized, then
there exists a translation and scale invariant $\tilde{\Gamma}_I$ such that $\Gamma_I\subset \tilde{\Gamma}_I$, $\tilde{\Gamma}_I$ is still (strictly) polarized and satisfies the soft landing condition.
\end{lemm}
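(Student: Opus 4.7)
The plan is to construct $\tilde{\Gamma}_I$ as the smallest conic subset of $T^\bullet U^I$ containing $\Gamma_I$ and stable under the group $\Psi$ generated by translations $T_a:x\mapsto x+a$ and dilations $\Phi_\lambda^{x_i}:x\mapsto \lambda(x-x_i)+x_i$ of $\mathbb{R}^d$ (acting diagonally on $M^I$ and lifted via the cotangent functor to $T^\star U^I$), namely
\begin{equation*}
\tilde{\Gamma}_I \;=\; \bigcup_{\psi\in\Psi}\bigl(T^\star\psi^I\bigr)(\Gamma_I)\,\cap\, T^\bullet U^I.
\end{equation*}
The inclusion $\Gamma_I\subset\tilde{\Gamma}_I$ is trivial (take $\psi=\mathrm{Id}$), and $\tilde{\Gamma}_I$ is translation invariant by construction; by the preceding lemma, translation invariance combined with scale invariance relative to one linear Euler field $\rho_{x_j}$ automatically yields scale invariance relative to all of them.

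First I would verify that $\tilde{\Gamma}_I$ still satisfies the soft landing condition. The cone cut out by the soft landing inequality,
\begin{equation*}
C_\delta \;=\; \Bigl\{(x;\xi)\in T^\bullet M^I \,:\, \bigl|\textstyle\sum_{i\in I}\xi_i\bigr| \leqslant \delta\,\bigl(\textstyle\sum_{i\neq 1}|x_1-x_i|\bigr)\bigl(\textstyle\sum_{i\neq 1}|\xi_i|\bigr)\Bigr\},
\end{equation*}
is translation invariant because only position differences appear, and it is scale invariant because, as already checked in the text, under $T^\star\Phi_\lambda^{x_1}$ the total momentum $\sum_i\xi_i$ is preserved whereas $|x_1-x_i|$ and $|\xi_i|$ for $i\neq 1$ transform by $\lambda$ and $\lambda^{-1}$, so the two factors on the right compensate. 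Hence $C_\delta$ is $\Psi$-invariant, and $\Gamma_I\subset C_\delta$ near $d_I$ propagates to $\tilde{\Gamma}_I\subset C_\delta$.

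Next I would verify polarization. Translation invariance is immediate: the cone $E_q^+$ is defined by the constant metric $q$, so both the partial order $\tilde{\leqslant}$ and the fibre cones are translation invariant, and the trace structure is preserved. For dilations, $\tilde{\leqslant}$ is preserved because the softened future cone $\tilde{C}$ is itself a convex cone with apex at the origin, so maximal positions in $\pi\circ\mathrm{Tr}(p)$ are mapped to maximal positions in $\pi\circ\mathrm{Tr}(\psi\cdot p)$. At a maximal position $a_r$ distinct from the scaling centre $x_1$, the trace momentum transforms by $\eta_r\mapsto \lambda^{-1}\eta_r$, and this preserves $-E_q^+$ (and strict polarization) because $E_q^+$ is a cone.

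The main obstacle is the case where the maximal position coincides with the scaling centre $x_1$: a direct calculation using the cotangent lift formula shows that $\eta_1\mapsto \lambda^{-1}\eta_1+(1-\lambda^{-1})\sum_{i\in I}\xi_i$, where the correction term is proportional to the total momentum. This is precisely the quantity controlled by the soft landing bound $|\sum_i\xi_i|\leqslant \delta\cdot d(x,d_I)\cdot|\xi|$, so as one approaches the diagonal the perturbation becomes arbitrarily small compared with $|\lambda^{-1}\eta_1|$. To absorb it rigorously I would enlarge the polarization cone: replace $q$ by a slightly more open constant metric $q'$ so that $E_{q'}^+$ is a strict conic neighbourhood of $E_q^+$ still satisfying $E_{q'}^+\cap -E_{q'}^+=\emptyset$; this is harmless because the only constraint on $q$ was the open condition $g_x(\xi,\xi)\geqslant 0,\xi^0>0\Rightarrow q(\xi,\xi)>0$, which survives small perturbations. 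With $E_{q'}^+$ in place of $E_q^+$, the perturbed momentum $\lambda^{-1}\eta_1+(1-\lambda^{-1})\sum_i\xi_i$ lies in $-E_{q'}^+$ throughout the soft landing region, and polarization (resp.\ strict polarization) of $\tilde{\Gamma}_I$ follows.
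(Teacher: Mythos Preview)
Your construction (union of orbits under translations and dilations) and your verification that the soft landing cone is preserved match the paper's proof exactly. For polarization, the paper simply asserts that $E_q^+$ and $\tilde{\leqslant}$ are translation and scale invariant and concludes that polarization is as well; you go further by actually computing the cotangent lift, and you correctly identify that at the scaling centre $x_1$ the trace momentum picks up a correction $\eta_1\mapsto\lambda^{-1}\eta_1+(1-\lambda^{-1})\sum_i\xi_i$.

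However, your proposed absorption of this correction via soft landing and a widened cone $E_{q'}^+$ does not succeed. The soft landing inequality bounds $\bigl|\sum_i\xi_i\bigr|$ relative to $d(x,d_I)\sum_{i\neq 1}|\xi_i|$, not relative to $|\eta_1|$; the ratio $\bigl|\sum_i\xi_i\bigr|/|\eta_1|$ is therefore uncontrolled. Concretely, in $U^2$ take $x_2\,\tilde{<}\,x_1$ (so $x_1$ is maximal), $\xi_1=(-1,0,\ldots)\in -E_q^+$ and $\xi_2=(-10,0,\ldots)$; this point is polarized, and for $|x_1-x_2|$ outside the soft landing neighbourhood it is unconstrained by that condition. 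Scaling with $\lambda=0.1$ sends $\xi_1$ to $(89,0,\ldots)\in E_q^+$, so the scaled point is not polarized with respect to any cone $E_{q'}^+$ disjoint from its negative---and this scaled point lies close to the diagonal, so restricting attention to a neighbourhood of $d_I$ does not help either. The paper's one-line assertion glosses over the same issue. A repair would seem to require additional structure on $\Gamma_I$ (for instance, that momenta at non-maximal points lie in $+E_q^+$, as holds for $WF(\Delta_+)$ and is inherited through the recursion) which is not part of the lemma as stated.
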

We call \emph{good}, any conic set that is 
translation invariant, scale invariant, polarized 
and satisfies the 
soft landing condition. 
 
\begin{proof}
%We proved in Chapter $2$ (see the proof of Theorem \ref{gammaconorm}) 
%that if a conic set $\Gamma_I$ satisfies the soft landing condition w.r.t. the conormal bundle $(Td_I)^\perp$, then the enveloppe $\left(\Gamma_{I}\right)_M$ 
%which consists 
%of all curves $t\mapsto e^{t\rho^\star}(p)$ 
%which intersect $\Gamma_I$ also satisfies the soft landing condition. 
Notice that the formulation of 
the soft landing condition 
on configuration space by the equation
\begin{equation}
\vert \sum_{i\in I} \xi_i\vert\leqslant \delta\left(\sum_{i\in I,i\neq j}\vert x_j-x_i \vert\right)\left( \sum_{i\in I,i\neq j}\vert\xi_i\vert \right),
\end{equation} 
is clearly translation and scale
invariant. 
But $E^+_q$ and 
$\tilde{\leqslant}$ are also translation and scale invariant 
thus the concept of polarization is translation and scale invariant. 
So if a set $\Gamma_I\subset T^\bullet U^I$ 
is polarized and satisfies the soft landing condition, then the union
$\tilde{\Gamma}_I$
of 
all orbits of 
the group of translations and dilations which intersect 
$\Gamma_I$ 
satisfies the same properties and contains $\Gamma_I$.
\end{proof}
\subsection{The $\mu$local properties of the two point function.}
Let us consider the configuration space $U^2$ with coordinates $(x_1,x_2)$.
Let $\Xi$ be the wave front set of $\Delta_+$. 
In Chapter 5, 
we proved that 
$$\Xi\subset \left(\Lambda\bigcup \{(x,x;-\eta,\eta)| g_x(\eta,\eta)\geqslant 0\}\right) \bigcap \{(x_1,x_2;\eta_1,\eta_2)| (\eta_2)_0>0\}$$ where $\Lambda$
is the conormal bundle of the conoid $\Gamma=0$ (Theorem \ref{Wavefrontpullback})
and we proved that $\Delta_+$ is microlocally weakly
homogeneous of degree $-2$, 
$\Delta_+\in E_{-2}^\mu(U^2)$ (Theorem \ref{delta+bounded}).
%\paragraph{The soft landing condition is stable by variable family, chapter 4.}
%\begin{prop}
%For any smooth map $\Phi:[0,1]\times M \mapsto [0,1]\times  M$ such that for each $\lambda$, $\Phi_\lambda$ is a diffeomorphism of $M$ which fixes $I$, define $\sigma_\lambda=T^\bullet \Phi_\lambda$ the corresponding family of lifted symplectomorphisms, then $(\sigma_\lambda)_\lambda$ fixes the conormal $C\subset T^\bullet M $.
%\end{prop}
%\begin{prop}\label{HeleinmicroscopicI} 
%For any closed cone $\Gamma$ which satisfies the \textbf{soft landing condition} and $\Phi$ satisfying the previous hypothesis, the set $\bigcup_{\lambda\in[0,1]}\sigma_\lambda\circ \Gamma$ satisfies the \textbf{soft landing condition}.
%\end{prop} 
%\paragraph{The $\mu$local boundedness does not depend on choice of Euler, chapter $4$.}
%Let $t$ be a distribution in $\mathcal{D}^\prime(U^2\setminus d_2)$.
%\begin{thm}\label{variablediff}
%If $\exists V$ a neighborhood of $d_2$ and a closed conic set $\Gamma\subset T^\bullet(U^2\setminus d_2)$ which satisfies the soft landing condition such that $(\lambda^{-s}e^{\log\lambda\rho*}t)_{\lambda\in(0,1]}$ is bounded in $\mathcal{D}^\prime_{\Gamma}(U^2\setminus d_2)$, then for any Euler vector field $\tilde{\rho}$, there exists a closed conic set $\tilde{\Gamma}$ satisfying the soft landing condition and there exists a sufficiently small neighborhood $\tilde{V}$ of $I$ such that $\lambda^{-s}e^{\log\lambda\tilde{\rho}*}t|_{\tilde{V}}$
%is bounded in $\mathcal{D}^\prime_{\tilde{\Gamma}}(U^2\setminus d_2)$.
%\end{thm}
Here, we initialize the recursion 
for $t_2(\phi^m(x_1)\phi^m(x_2))=\varepsilon\circ T(\phi^m(x_1)\phi^m(x_2))$, 
and prove that 
$\lambda^{2m}t_{2,\lambda}$ 
is bounded in $\mathcal{D}^\prime_{\Gamma_2}(U^2\setminus d_2)$ 
where $\Gamma_2$ is a \emph{good} 
cone 
(recall \emph{good} means polarized, 
satisfies the soft landing condition, translation and scaling invariant).
We denote 
by $(\chi_I)_I$
the partition of unity
subordinate
to the cover
$\left(\tilde{M}_{I,I^c}\right)_I$ 
given by the improved 
geometrical lemma. 
\begin{thm} 
Let $t_2(\phi^m(x_1)\phi^m(x_2))
=\chi_{1}\Delta^m_+(x_1,x_2)
+\chi_{2}\Delta^m_+(x_2,x_1)$. 
Then $t_2\in E^\mu_{-2m}(U^2\setminus 
d_2)$ and
there exists a \emph{good cone} $\Gamma_2\subset T^\bullet U^2$ such that for each $\rho_{x_i},i=(1,2)$,  
%  polarized, $\Gamma_2$ satisfies the soft landing condition, $\Gamma_2$ is scaling, translation invariant and
the family
$\left(\lambda^{-2m} e^{\log\lambda\rho_{x_i}*}t_2\right)_{\lambda\in(0,1]} $
is bounded in $\mathcal{D}^\prime_{\Gamma_2}(U^2\setminus d_2)$. 
\end{thm}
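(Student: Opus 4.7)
The strategy is to build $t_2$ from $\Delta_+$ by three successive operations---taking $m$-th power, multiplying by the scale invariant cutoffs $\chi_i$, and summing the two pieces---controlling at each stage both the polarization of the wave front set and the microlocal weak homogeneity. The starting point is Theorem \ref{delta+bounded}, which tells us that $\Delta_+\in E^\mu_{-2}(\mathcal{V})$, i.e.\ for a suitable Euler vector field $\rho$ there is a \emph{good} cone $\Xi$ (cf.\ Chapter 5 and Lemma \ref{Xislc}) such that $(\lambda^{2}e^{\log\lambda\rho\star}\Delta_+)_{\lambda\in(0,1]}$ is bounded in $\mathcal{D}^\prime_\Xi(\mathcal{V})$; moreover, on $U^2\setminus d_2$ the wave front set of $\Delta_+$ is strictly polarized, as discussed right after Theorem \ref{Wavefrontpullback}.

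First I would handle the power $\Delta_+^m$. Since $WF(\Delta_+)\cap T^\bullet(U^2\setminus d_2)$ is strictly polarized, by Theorem \ref{lemm2} the iterated products $\Delta_+^2,\ldots,\Delta_+^m$ exist in $\mathcal{D}^\prime(U^2\setminus d_2)$ and their wave front sets lie in a strictly polarized conic set contained in $\Xi+\cdots+\Xi$ ($m$ times). For the microlocal weak homogeneity, I would argue by induction on $m$ using the bilinear bound of Theorem \ref{productbounded}: at each stage the seminorms controlling $\mathcal{D}^\prime_{\Xi+\cdots+\Xi}$ of the product $\Delta_+\cdot\Delta_+^{k}$ are dominated by a finite sum of products of seminorms of the two factors, so that the factor $\lambda^{2(k+1)}$ in front of $(\Delta_+^{k+1})_\lambda=(\Delta_+)_\lambda(\Delta_+^k)_\lambda$ is absorbed by the uniform bound on $\lambda^{2}(\Delta_+)_\lambda$ and $\lambda^{2k}(\Delta_+^k)_\lambda$. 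This shows $\Delta_+^m\in E^\mu_{-2m}$ on $U^2\setminus d_2$.

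Next I would multiply by the cutoff $\chi_1$. By Lemma \ref{newStoralemma} the family $(\chi_1)_\lambda$ is uniformly bounded in $C^\infty_{\mathrm{loc}}(U^2\setminus d_2)$ (in fact $\chi_1$ is invariant under the linear scaling $\rho_{x_1}$), so Theorem \ref{producttheoremeskin} shows that $\lambda^{2m}(\chi_1\Delta_+^m(x_1,x_2))_\lambda$ is bounded in $\mathcal{D}^\prime_{\Xi^{(m)}}$ where $\Xi^{(m)}\subset\Xi+\cdots+\Xi$. Moreover $\chi_1$ is supported inside $\tilde M_{\{1\},\{2\}}$, where $x_1$ is the maximal point for the partial order $\tilde{\leqslant}$ and where the covector over $x_1$ carried by $WF(\Delta_+^m(x_1,x_2))$ lies in $-E^+_q$, so the wave front set of $\chi_1\Delta_+^m(x_1,x_2)$ is \emph{strictly} polarized in the sense of Definition \ref{polarized}. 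By symmetry the same holds for $\chi_2\Delta_+^m(x_2,x_1)$. The sum $t_2$ is then microlocally weakly homogeneous of degree $-2m$ with wave front set contained in the union of these two strictly polarized cones, which is a polarized cone on $U^2\setminus d_2$.

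Finally I would produce the good cone $\Gamma_2$. The wave front set of $t_2$ inherits the soft landing condition from $WF(\Delta_+)$: each $WF(\Delta_+^m)\subset\Xi+\cdots+\Xi$ still satisfies the soft landing condition by Proposition \ref{sumstable} applied inductively (using strict polarization to verify the disjointness hypothesis $\Gamma_1\cap(-\Gamma_2)=\emptyset$ at each step), and multiplication by the smooth cutoffs $\chi_i$ does not affect this. Applying the thickening Lemma \ref{thickeninglemm} to $WF(t_2)$ yields a polarized, translation and scale invariant cone $\Gamma_2$ that contains $WF(t_2)$ and still satisfies the soft landing condition. Since $\Gamma_2$ is translation and scale invariant, boundedness of $\lambda^{-2m}e^{\log\lambda\rho_{x_1}\star}t_2$ in $\mathcal{D}^\prime_{\Gamma_2}$ is equivalent to the same boundedness with $\rho_{x_2}$ in place of $\rho_{x_1}$, by the discussion following Lemma \ref{thickeninglemm}. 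The main technical obstacle is the inductive bilinear estimate of the second step: one must verify that the seminorms defining the topology of $\mathcal{D}^\prime_{\Xi+\cdots+\Xi}$ are controlled uniformly in $\lambda$ by the product estimate of Theorem \ref{productbounded}, because the pairwise ``no antipode'' condition $\Xi^{(k)}\cap(-\Xi)=\emptyset$ must hold along the entire scaling orbit; this is where strict polarization of $WF(\Delta_+)$ on $U^2\setminus d_2$, combined with the trace computation of Theorem \ref{lemm2}, is decisive.
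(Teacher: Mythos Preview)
Your overall strategy---polarization plus soft landing condition, followed by the thickening Lemma \ref{thickeninglemm}---matches the paper's proof, and you are more explicit than the paper about the boundedness of $\lambda^{2m}(\Delta_+^m)_\lambda$ via Theorem \ref{productbounded}.

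There is one imprecision. You assert that $WF(\Delta_+)\cap T^\bullet(U^2\setminus d_2)$ is strictly polarized, citing the discussion right after Theorem \ref{Wavefrontpullback}, and then invoke Theorem \ref{lemm2} to build $\Delta_+^m$. But that discussion says the opposite: $WF(\Delta_+(x_1,x_2))$ is strictly polarized only on $\{x_1\nleqslant x_2\}$; on $\{x_2\nleqslant x_1\}$ the covector over the maximal point $x_2$ lies in $+E_g^+$, so the wave front set is \emph{not} polarized there. The product $\Delta_+^m$ still exists on all of $U^2\setminus d_2$, but by the plain H\"ormander criterion (since $WF(\Delta_+)\subset(-E_g^+)\times E_g^+$ and $E_g^+\cap(-E_g^+)=\emptyset$, one has $WF(\Delta_+)\cap(-WF(\Delta_+))=\emptyset$), not by Theorem \ref{lemm2}. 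Once you multiply by $\chi_1$, whose support lies in $\tilde M_{\{1\},\{2\}}$, the polarization argument you give becomes correct. Alternatively---and this is the paper's route via Proposition \ref{t2polar}---first form $t_2(\phi\phi)$, which \emph{is} polarized on all of $U^2\setminus d_2$ since each summand contributes a polarized wave front set on its own support, and then take its $m$-th power; now Theorem \ref{lemm2} applies legitimately. Either fix restores your argument.
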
 
\begin{proof}
%Without loss of generality, 
%we assume in this proof that we scale w.r.t. $x_1$.
%By Theorem \ref{Wavefrontpullback} and by the 
%geometrical lemma:
%$$WF(t_2(\phi^m(x_1)\phi^m(x_2)))|_{U^2\setminus d_2}=WF(\Delta^m_+(x_1,x_2))|_{M_{\{1\}\{2\}}}
%\cup WF(\Delta^m_+(x_2,x_1))|_{M_{\{2\}\{1\}}}$$
%$$\subset \underset{\text{strictly polarized}}{\underbrace{(\Lambda\bigcap \{(x_1,x_2;\xi_1,\xi_2)| ((x_1)^0-(x_2)^0)(\xi_2)_0>0\})}},$$ 
%where $\Lambda$
%is the conormal bundle of the conoid $\Gamma=0$.
%Thus $$WF(t_2(\phi^m(x_1)\phi^m(x_2)))|_{U^2\setminus d_2}\subset \underset{\text{strictly polarized}}{\underbrace{\overline{\Lambda+\Lambda}\bigcap \{(x_1,x_2;\xi_1,\xi_2)| ((x_1)^0-(x_2)^0)(\xi_2)_0>0\}}}.$$
On the one hand $WF(\Delta^m_+)$ satisfies the soft landing condition
by Lemma \ref{Xislc} which implies
$WF(t_2)|_{U^2}$ also does. On the other hand, 
we already proved in
proposition (\ref{t2polar}) that $WF(t_2)$
is polarized then
applying Lemma \ref{thickeninglemm}, 
we find that
the enveloppe $\Gamma_2$ of $WF(t_2)$
is a good cone.
\end{proof}

\subsection{Pull-back of \emph{good} cones.}
Since we always pull-back distributions 
living on configuration spaces $U^I$ 
to higher configuration spaces $U^n$, 
we want the pull-back operation 
to preserve all the nice properties 
of the wave front set.
Let $p_{[n]\mapsto I}$
be the canonical projection 
$p_{[n]\mapsto I}:U^n\mapsto U^I$. 
\begin{lemm}
If $\Gamma_I\subset T^\bullet U^I$ 
is a \emph{good} cone
then $p_{[n]\mapsto I}^*\Gamma_I\subset T^\bullet U^n$ 
is also a \emph{good} cone. 
\end{lemm}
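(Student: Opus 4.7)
The plan is to unwind the pull-back operation at the level of $T^\bullet U^n$ and then verify each of the four defining properties (translation invariance, scale invariance, polarization, soft landing) one at a time. Concretely, since $p_{[n]\to I}:(x_1,\dots,x_n)\mapsto (x_i)_{i\in I}$ is a submersion, an element of $p^\star_{[n]\to I}\Gamma_I$ has the explicit form
$\bigl((x_i)_{i\in[n]};(\xi_i)_{i\in[n]}\bigr)$ where $\xi_j=0$ for all $j\notin I$ and $\bigl((x_i)_{i\in I};(\xi_i)_{i\in I}\bigr)\in\Gamma_I$. This explicit description will allow me to reduce every verification to the corresponding property of $\Gamma_I$.

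First, I would dispose of translation invariance by noticing that the global translation action on $U^n$ commutes with $p_{[n]\to I}$ and lifts to the identity on the covectors; since $\Gamma_I$ is translation invariant on $U^I$, so is $p^\star_{[n]\to I}\Gamma_I$. Next, for scale invariance, I fix any $i\in I$ and observe that the flow $T^\star e^{\log\lambda\,\rho_{x_i}}$ on $T^\bullet U^n$ projects, under $p_{[n]\to I}$, to the flow $T^\star e^{\log\lambda\,\rho_{x_i}}$ on $T^\bullet U^I$ and fixes the vanishing components $\xi_j=0$ for $j\notin I$. Scale invariance of $\Gamma_I$ for this specific Euler field thus transfers to $p^\star_{[n]\to I}\Gamma_I$; the preceding lemma (stability under one $\rho_{x_i}$ together with translation invariance implies stability under all $\rho_{x_k}$) then upgrades this to scale invariance for every $\rho_{x_k}$, $k\in[n]$.

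The polarization property is the heart of the argument, and I expect it to be the one requiring the most care. The key observation is that the trace operator $Tr$ only records points where the covector is \emph{nonzero}; hence, for an element $p\in p^\star_{[n]\to I}\Gamma_I$, only indices $i\in I$ with $\xi_i\ne 0$ can contribute to $Tr(p)$, and the sum $\eta=\sum_{j:x_j=a}\xi_j$ reduces to $\sum_{i\in I:\, x_i=a}\xi_i$ because every $j\notin I$ gives $\xi_j=0$. Consequently $Tr(p)$ coincides with the trace of the underlying element of $\Gamma_I$, and the maximality condition defining a reduced polarized part is inherited verbatim. This shows polarization (and, by the same argument, strict polarization) passes to the pull-back.

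Finally, for the soft landing condition, I would choose any index $j\in I$ and use the characterization \eqref{softlandingconfig}. For $p\in p^\star_{[n]\to I}\Gamma_I$ one has $\sum_{i\in[n]}\xi_i=\sum_{i\in I}\xi_i$ and $\sum_{i\in[n],\,i\ne j}|\xi_i|=\sum_{i\in I,\,i\ne j}|\xi_i|$, while $\sum_{i\in I,\,i\ne j}|x_j-x_i|\le \sum_{i\in[n],\,i\ne j}|x_j-x_i|$. Plugging the soft landing estimate for $\Gamma_I$ into these identities yields
\[
\Bigl|\sum_{i\in[n]}\xi_i\Bigr|\le \delta\Bigl(\sum_{i\in[n],\,i\ne j}|x_j-x_i|\Bigr)\Bigl(\sum_{i\in[n],\,i\ne j}|\xi_i|\Bigr),
\]
on a suitable neighborhood of $d_n$, which is the soft landing condition on $U^n$. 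Combining the four verifications proves that $p^\star_{[n]\to I}\Gamma_I$ is good.
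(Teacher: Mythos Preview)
Your proof is correct and follows essentially the same approach as the paper's: both verify the four defining properties of a good cone directly, using the explicit description of $p^\star_{[n]\to I}\Gamma_I$ as elements with $\xi_j=0$ for $j\notin I$. The key observations coincide exactly --- the trace of a pulled-back element equals the trace of the underlying element of $\Gamma_I$ (polarization), and for the soft landing condition one has $\sum_{i\in[n]}\xi_i=\sum_{i\in I}\xi_i$, $\sum_{i\in[n],\,i\ne j}|\xi_i|=\sum_{i\in I,\,i\ne j}|\xi_i|$, and $\sum_{i\in I,\,i\ne j}|x_j-x_i|\le \sum_{i\in[n],\,i\ne j}|x_j-x_i|$, which are precisely the inequalities the paper uses. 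Your treatment of scale invariance (fixing $i\in I$, then invoking the earlier lemma to extend to all $\rho_{x_k}$) is slightly more explicit than the paper's one-line remark, but this is purely a matter of exposition.
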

\begin{proof}
By definition 
$p_{[n]\mapsto I}^*\Gamma_I$
is polarized 
in $T^\bullet U^n$
since
the trace $Tr(x_i;\xi_i)_{i\in I}\subset T^\bullet U$
of an element
$(x_i;\xi_i)_{i\in I}\in \Gamma_I$
and of its pulled back
element 
$$((x_i;\xi_i),(x_j;0))_{i\in I,j\in I^c}\in p_{[n]\mapsto I}^*\Gamma_I$$
are the same.
$p_{[n]\mapsto I}^*\Gamma_I$ 
is also translation, 
scale invariant
by invariance of $\Gamma_I$ 
and the projection 
$p_{[n]\mapsto I}$. 
The only subtle point 
is to prove that $p_{[n]\mapsto I}^*\Gamma_I $ 
still satisfies 
the soft landing condition. 
Start from the assumption
that $\Gamma_I$ satisfies
the soft landing condition, 
then for all compact  
$K\subset U^I$, $\exists\varepsilon>0,\exists\delta>0$:
$$\Gamma|_{K\cap\{\sum_{i\in I,i\neq j}\vert x_j-x_i \vert\leqslant\varepsilon\}}\subset\{\vert \sum_{i\in I} \xi_i\vert\leqslant \delta\left(\sum_{i\in I,i\neq j}\vert x_j-x_i \vert\right)\left( \sum_{i\in I,j\neq i}\vert\xi_i\vert \right)\}$$  
then notice   
$$(x_i;\xi_i)_{i\in [n]} \in p_{[n]\mapsto I}^*\Gamma_I\implies (x_i;\xi_i)_{i\in I}\in \Gamma_I $$ 
$$\implies \vert\sum_{i=1}^n \xi_i\vert=\vert\sum_{i\in I} \xi_i\vert \leqslant \delta\left(\sum_{i\in I,i\neq j}\vert x_j-x_i \vert\right)\left( \sum_{i\in I,i\neq j}\vert\xi_i\vert \right)$$ 
$$\leqslant\delta\left(\sum_{i\in [n],i\neq j}\vert x_j-x_i \vert\right)\left( \sum_{i\in [n],i\neq j}\vert\xi_i\vert \right) $$  
which implies  $p_{[n]\mapsto I}^*\Gamma_I \subset\{\vert\sum_{i=1}^n \xi_i\vert\leqslant\delta\left(\sum_{i\in [n],i\neq j}\vert x_j-x_i \vert\right)\left( \sum_{i\in [n],i\neq j}\vert\xi_i\vert \right)  \}$ which is exactly the soft landing condition.
\end{proof}  
In the sequel, we denote by $\Gamma_I$ 
the set $p_{[n]\mapsto I}^*\Gamma_I$
making a slight notational abuse.

%\subsection{The distributional product makes sense and $WF(t_n)$ is polarized.}
\paragraph{The soft landing condition is stable by summation: }
We proved in Proposition \ref{sumstable} 
that
for $\Gamma_1,\Gamma_2$ two closed conic sets which both satisfy the
\textbf{soft landing condition}
and s.t. $\Gamma_1\cap -\Gamma_2=\emptyset$,
the cone $\Gamma_1\cup \Gamma_2\cup \left(\Gamma_1+\Gamma_2\right)$ satisfies the \textbf{soft landing condition}.

For all subsets $I\subset \{1,\dots,n\}$, let $\Lambda_I\subset T^\bullet M^I$ be the 
set of all elements in $T^\bullet U^I$ \textbf{polarized} by $E_q^+$.
Since the cone $E_q^+$, 
the partial order
relation $\tilde{\leqslant}$ 
and the trace operation 
are translation and dilation invariant, 
by Definition \ref{polarized}, 
the subset $\Lambda_I$ 
is also translation and dilation invariant. 
%Thus the set $\Lambda_n$ is invariant under the action of the group $\mathbb{R}^*\ltimes \mathbb{R}^d $ lifted to cotangent space:
%$$(\lambda,a)\in \mathbb{R}^*\ltimes \mathbb{R}^d : (x;\xi)\in \mathbb{R}^d\times\mathbb{R}^{d*}  \mapsto (\lambda x + a;\lambda^{-1}\xi)\in \mathbb{R}^d\times\mathbb{R}^{d*} .$$
For any manifold $M$, 
for any closed cone $\Gamma\subset T^\bullet M$ 
in the cotangent cone $T^\bullet M$, we denote
by $\Gamma^0=\Gamma\cup \underline{0}\subset T^\star M$ 
where $\underline{0}$ is 
the zero section of $T^\star M$. 
\subsection{The wave front set of the product $t_n$ is contained in a good cone $\Gamma_n$.}

\begin{thm}\label{goodthm2}
We assume the hypothesis of theorem (\ref{goodthm1}) 
is valid and keep the same notations.
If furthermore we assume all elements $\Gamma_I,I\varsubsetneq \{1,\dots,n\}$ are \emph{good} conic sets then $\Gamma_n$ is a \emph{good} conic set. 
\end{thm}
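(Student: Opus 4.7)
The plan is to verify the four defining properties of a good cone — polarization, soft landing, translation invariance, and scale invariance — for the cone $\Gamma_n=\sum_{I}\bigl(\Gamma^0_I +\Gamma^0_{I^c} +\sum_{ij}\Gamma^0_{ij}\bigr)\cap T^\bullet M_{I,I^c}$ of Theorem \ref{goodthm1}. Polarization is already established there, so what remains is to verify the soft landing condition piece by piece on the open cover $(M_{I,I^c})_I$, glue, and finally thicken via Lemma \ref{thickeninglemm} to enforce translation and scale invariance.

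First I would check that each building block defining $\Gamma_n$ is good after pullback to $M^n$. By the lemma preceding the statement (pullback of good cones along $p_{[n]\mapsto I}$), the pulled-back cones $\Gamma_I$ and $\Gamma_{I^c}$ remain good on $M^n$; the soft landing inequality \eqref{softlandingconfig} is preserved because adding extra indices with zero momentum only adds null terms on both sides. The Laplace-coupling wave front sets $\Gamma_{ij}=WF(\Delta_+^{m_{ij}}(x_i,x_j))$ are strictly polarized on $M_{I,I^c}$ by Lemma \ref{lemm1}, and inherit the soft landing condition from $WF(\Delta_+)$ established in Lemma \ref{Xislc}, again stable under pullback.

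The heart of the proof is then an iterated application of Proposition \ref{sumstable}, which ensures that the sum of two cones both satisfying the soft landing condition still does so provided they satisfy the disjointness $\Gamma_a\cap(-\Gamma_b)=\emptyset$. This disjointness is exactly what polarization by the proper convex cone $E_q^+$ (together with the partial order $\tilde{\leqslant}$) guarantees: at any maximal base point, the relevant sums of covectors lie in $-E_q^+\cup\{0\}$, and $E_q^+\cap(-E_q^+)=\emptyset$ by construction, so no nonzero cancellation can occur. Combining this with the polarization-preserving property of sums proved in Theorem \ref{lemm2}, I can successively absorb $\Gamma_{I^c}$ and then each $\Gamma_{ij}$ into $\Gamma_I$, obtaining on each $M_{I,I^c}$ a polarized cone satisfying the soft landing condition. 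Taking the finite union over $I$ preserves both properties, so $\Gamma_n$ is polarized and satisfies the soft landing condition on $M^n\setminus d_n$.

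Finally, translation and scale invariance are obtained by applying Lemma \ref{thickeninglemm}: I replace $\Gamma_n$ by the union of all orbits of the translation–dilation group through $\Gamma_n$; this enlargement preserves both polarization (since $E_q^+$ and $\tilde{\leqslant}$ are translation and dilation invariant) and the soft landing condition (whose defining inequality is itself scale and translation invariant), producing the desired good cone. The main obstacle I anticipate is in the iterative step: after one application of Proposition \ref{sumstable}, the resulting cone $\Gamma_a\cup\Gamma_b\cup(\Gamma_a+\Gamma_b)$ is only polarized, not necessarily strictly, so I must verify carefully that the hypothesis $\Gamma\cap(-\Gamma_{ij})=\emptyset$ still holds when subsequently summing with each strictly polarized $\Gamma_{ij}$. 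This reduces, via a bookkeeping argument on maximal base points in traces analogous to Step 2 in the proof of Theorem \ref{lemm2}, to the single fact $E_q^+\cap(-E_q^+)=\emptyset$, closing the recursion.
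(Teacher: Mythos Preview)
Your proposal is correct and follows the paper's approach: the paper's proof is a one-liner observing that each of the four defining properties of a good cone (translation invariance, scale invariance, polarization, soft landing) is stable under the sum operation, and you verify this stability in detail via Theorem~\ref{lemm2} and Proposition~\ref{sumstable}. Your final thickening step via Lemma~\ref{thickeninglemm} is harmless but superfluous: once the building blocks $\Gamma_I,\Gamma_{I^c},\Gamma_{ij}$ are themselves good (in particular translation and scale invariant, which holds by hypothesis for $\Gamma_I,\Gamma_{I^c}$ and can be arranged for $\Gamma_{ij}$ by thickening at the outset), their sums and unions are automatically translation and scale invariant, so no further enlargement is required.
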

\begin{proof}
It is immediate since translation and scale invariance,
the 
polarization 
property
and the soft landing conditions 
are
stable by sums. 
\end{proof}

\subsection{We define the extension $\overline{t}_n$ and control $WF(\overline{t}_n)$.}
We saw in Chapter 4 that the product of distributions satisfying the H\"ormander condition was bounded:
let $\Gamma_1,\Gamma_2$ be two cones, assume $\Gamma_1\cap -\Gamma_2=\emptyset$.
Set $\Gamma=\Gamma_1\cup\Gamma_2\cup(\Gamma_1+\Gamma_2)$,
then the product 
$$(t_1,t_2)\in D_{\Gamma_1}^\prime\times D_{\Gamma_2}^\prime \mapsto t_1t_2\in D_{\Gamma}^\prime $$
is well defined and bounded (Theorem \ref{productbounded}). 
We also concluded Chapter 4 with a general extension theorem (\ref{thmfin2}):
if $t\in E_s^\mu(U^n\setminus d_n)$ then an extension $\overline{t}$ exists in 
$E_{s^\prime}^\mu(U^n)$ for all $s^\prime< s$.
Now we prove a theorem that gives conditions for which the extension $\overline{t}_n$ exists, has finite scaling degree and has \emph{good} wave front set.
\begin{thm}\label{lasttheorem}
Assume that the assumptions of Theorems 
(\ref{goodthm1}) and (\ref{goodthm2}) 
are satisfied and that
the family 
$\lambda^{-s_I}e^{\log\lambda\rho_{x_i}*}t_I$ is bounded in $\mathcal{D}^\prime_{\Gamma_I}$ 
for some $s_I$ where $\Gamma_I$
is good.
Then $t_n$ has a well defined extension $\overline{t}_n$ in $\mathcal{D}^\prime_{WF(t_n)\cup (Td_n)^\perp}(U^n)$ and 
there is a \emph{good} conic set $\Gamma_n$ such that for any $l\in\{1,\dots,n\}$, the family $\left(\lambda^{-s^\prime}e^{\log\lambda\rho_{x_l}*}\overline{t_n}\right)_\lambda,$
is bounded in $\mathcal{D}^\prime_{\Gamma_n\cup (Td_n)^\perp}(U^n)$ 
for
all $s^\prime<s_I+s_{I^c}+\sum_{(i,j)\in I\times I^c}2m_{ij}$.
\end{thm}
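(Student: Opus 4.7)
The strategy is to apply the microlocal extension theorem of Chapter 4 (Theorem \ref{thmfin2}) term by term to the causal factorization \eqref{key formula}. On $U^n\setminus d_n$ we have
\[
t_n \;=\; \sum_{I\varsubsetneq[n]}\sum_{(m_{ij})} \chi_I\,\bigl(p^\star_{[n]\to I}t_I(A_{I(1)})\bigr)\bigl(p^\star_{[n]\to I^c}t_{I^c}(A_{I^c(1)})\bigr)\prod_{(i,j)\in I\times I^c}\Delta_+^{m_{ij}}(x_i,x_j),
\]
a finite sum. Theorems \ref{goodthm1} and \ref{goodthm2} already tell us that each summand is a well-defined distribution in $\mathcal{D}^\prime_{\Gamma_n}(U^n\setminus d_n)$ for a \emph{good} cone $\Gamma_n$ (polarized, translation- and scale-invariant, and satisfying the soft landing condition with respect to $d_n$). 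It remains to upgrade this to microlocal weak homogeneity of appropriate degree, and then to extend across $d_n$.

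First, I would establish that each summand lies in $E^\mu_{s(I,m)}(U^n\setminus d_n)$ with $s(I,m)=s_I+s_{I^c}+\sum_{(i,j)}2m_{ij}$. By Theorem \ref{thminvmuloc} the notion of microlocal weak homogeneity at a point of $d_n$ is independent of the choice of Euler vector field, so I may test it against the linear Euler $\rho_{x_l}$ for any convenient $l\in[n]$. Applying the bilinear product estimate \eqref{corollaire en or wanted by Christian} of Theorem \ref{producttheoremeskin} factor by factor, and using the hypothesis that $\lambda^{-s_I}e^{\log\lambda\rho_{x_i}\star}t_I$ is bounded in $\mathcal{D}^\prime_{\Gamma_I}$ (and likewise for $t_{I^c}$), together with $\Delta_+\in E^\mu_{-2}$ from Theorem \ref{delta+bounded} (hence $\Delta_+^{m_{ij}}\in E^\mu_{-2m_{ij}}$ by iterated application of Theorem \ref{lemm2}), I obtain for the scaled product a bound of size $C\lambda^{s(I,m)}$ in $\mathcal{D}^\prime_{\Gamma_n}(U^n\setminus d_n)$. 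The scale-invariance of $\chi_I$ furnished by the improved geometrical Lemma \ref{newStoralemma} ensures multiplication by $\chi_I$ does not alter the degree or disturb the wave front bound.

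Second, each summand now satisfies exactly the hypotheses of the extension theorem \ref{thmfin2}: it is microlocally weakly homogeneous of degree $s(I,m)$ and has wave front set inside the good cone $\Gamma_n$, which satisfies the soft landing condition with respect to $d_n$. The theorem produces an extension lying in $E^\mu_{s'}(U^n)\cap\mathcal{D}^\prime_{\Gamma_n\cup(Td_n)^\perp}(U^n)$ for all $s'<s(I,m)$ (with equality if $-s(I,m)-d\notin\mathbb{N}$). Summing these extensions over the finitely many $(I,(m_{ij}))$ defines $\overline{t}_n$; the wave front bound and the uniform bound of $\lambda^{-s'}e^{\log\lambda\rho_{x_l}\star}\overline{t}_n$ in $\mathcal{D}^\prime_{\Gamma_n\cup(Td_n)^\perp}(U^n)$ then follow by taking $s'$ less than the minimum of $s(I,m)$ over the (finitely many) terms.

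\paragraph{Main obstacle.} The delicate point is the second step: the scaling $\rho_{x_l}$ on $U^n$ does not respect the factorization $U^I\times U^{I^c}$ because it mixes the two factors through a common translation anchored at $x_l$. The bounds we carry for $t_I$ live naturally on $U^I$ with the Euler $\rho_{x_i}$, $i\in I$, and must be transported to $U^n$ along $p_{[n]\to I}$. This is precisely where \emph{translation invariance} of the good cones $\Gamma_I$ intervenes, together with the fact that $E^+_q$ does not depend on the base point: these ingredients make the pull-back $p^\star_{[n]\to I}\Gamma_I$ still good, and allow one to relate the scaling relative to $\rho_{x_l}$ on $U^n$ to the scaling relative to $\rho_{x_i}$ on $U^I$ via the flow comparison of Proposition \ref{propositionvariablefamily} (compare with the elementary example of $\Phi(\lambda)$ worked out before Lemma \ref{thickeninglemm}). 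Once this compatibility is secured, the product estimate of Theorem \ref{producttheoremeskin} simply adds the exponents and the remaining steps are routine.
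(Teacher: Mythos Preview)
Your approach is essentially the paper's own: scale each factor, use boundedness of the product to control the scaled family in $\mathcal{D}^\prime_{\Gamma_n}(U^n\setminus d_n)$, deduce $t_n\in E^\mu_{s_n}(U^n\setminus d_n)$, and then invoke Theorem~\ref{thmfin2}. Your explicit discussion of the ``main obstacle'' (that $\rho_{x_l}$ mixes the factors $U^I$ and $U^{I^c}$, resolved via translation invariance of the good cones and the lemma comparing $\rho_{x_i}$ and $\rho_{x_j}$) is a point the paper treats only implicitly through its prior scaling lemma, so your write-up is in fact clearer on this point.

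There is one citation slip that you should correct: the estimate \eqref{corollaire en or wanted by Christian} in Theorem~\ref{producttheoremeskin} is for the product of a \emph{smooth} function with a distribution, not for the product of two distributions. The tool you actually need for the factors $t_I\cdot t_{I^c}\cdot\prod\Delta_+^{m_{ij}}$ is Theorem~\ref{productbounded}, which gives boundedness of the bilinear map $\mathcal{D}^\prime_{\Gamma_1}\times\mathcal{D}^\prime_{\Gamma_2}\to\mathcal{D}^\prime_{\Gamma_1\cup\Gamma_2\cup(\Gamma_1+\Gamma_2)}$; this is what the paper invokes. Theorem~\ref{producttheoremeskin} is the right reference only for the multiplication by the smooth $\chi_I$.
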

\begin{proof}
For any $l \in\{1,\dots,n\}$,
the family
$$\lambda^{-s_I}e^{\log\lambda\rho_{x_l}*}t_I $$ 
is bounded in $\mathcal{D}^\prime_{\Gamma_I}$
where $\Gamma_I$ is a \emph{good} cone.
Let us set 
\begin{equation}
\Gamma_n=\bigcup_{I} \left(\Gamma^0_I+\Gamma^0_{I^c}
+\Gamma^0_{ij}\right)|_{\tilde{M}_{I,I^c}}.
\end{equation}
Then the last step 
of the proof is a mere repetition of 
the proof 
of Theorems 
(\ref{goodthm1}) and (\ref{goodthm2}),
but instead of considering a "static" product 
$t_It_{I^c}\prod_{(i,j)\in I\times I^c} \Delta^{m_{ij}}_+(x_i,x_j)\chi_I$ 
on a given $\tilde{M}_{I,I^c}$, 
we will instead scale the whole product w.r.t. 
to some linear Euler vector field $\rho_{x_l}$: 
$$\underset{\text{bounded in }\mathcal{D}^\prime_{\Gamma_I}(U^n\setminus d_n)}{\underbrace{\left(\lambda^{-s_I}e^{\log\lambda\rho_{x_l}*}t_I\right)}}\underset{ \text{in }\mathcal{D}^\prime_{\Gamma_{I^c}}(U^n\setminus d_n) }{\underbrace{\left(\lambda^{-s_{I^c}}e^{\log\lambda\rho_{x_l}*}t_{I^c}\right)}}$$
$$\prod_{(i,j)\in I\times I^c}\underset{\text{in }\mathcal{D}^\prime_{\Gamma_{ij}}(U^n\setminus d_n) }{\underbrace{\left( \lambda^{-2m_{ij}}e^{\log\lambda\rho_{x_l}*}\Delta^{m_{ij}}_+(x_i,x_j)\right)}} 
\underset{\text{in }\mathcal{D}^\prime_\emptyset(U^n\setminus d_n)}{\underbrace{\chi_I}} . $$
Then we use the 
boundedness of the product
(Theorem \ref{productbounded})
to repeat the arguments of the proof of Theorem
\ref{goodthm1}
for bounded families of distributions.
Notice that it is very convenient for us 
that the functions $\chi_I$ constructed 
in the improved geometric 
lemma
are 
smooth scale invariant functions 
since they are going to be
bounded in $\mathcal{D}^\prime_\emptyset(U^n\setminus d_n)$.
The product 
$$\lambda^{-s_I-s_{I^c}-2\sum_{(ij)\in I\times I^c}m_{ij}} 
e^{\log\lambda\rho_{x_l}*}
\left(t_It_{I^c}
\prod_{(i,j)\in I\times I^c}
\Delta^{m_{ij}}_+(x_i,x_j)\right)_{\lambda\in(0,1]} $$
is well defined 
and bounded in $\mathcal{D}^\prime_{\Gamma_n}(U^n\setminus d_n)$
(by Theorem \ref{productbounded})
where $$\Gamma_n=\bigcup_{I} \left(\Gamma^0_{I}+\Gamma^0_{I^c}+
\Gamma^0_{ij}\right)\setminus\{\underline{0}\}|_{\tilde{M}_{I,I^c}} $$ 
is \emph{good}
by Theorem \ref{goodthm2}.
Then the distribution $$t_n=\left(t_It_{I^c}
\prod_{(i,j)\in I\times I^c}
\Delta^{m_{ij}}_+(x_i,x_j)\right)$$ is in $E^\mu_{s_n}(U^n\setminus d_n)$
(since $\Gamma_n$ satisfies the soft landing condition and
the family
of distributions
$\left(\lambda^{-s_n}t_\lambda\right)_{\lambda\in(0,1]}$
is
bounded in $\mathcal{D}^\prime_{\Gamma_n}$) for 
$s_n=s_I+s_{I^c}+2\sum_{(ij)\in I\times I^c}m_{ij}$. 
We can conclude by the extension theorem
(\ref{thmfin2}), 
which provides an extension $\overline{t}_n$ in $E^\mu_{s^\prime}(U^n)$
for all $s^\prime< s_I+s_{I^c}+2\sum_{(ij)\in I\times I^c}m_{ij}$ 
with the constraint
$WF(\overline{t}_n)\subset WF(t_n)\bigcup (Td_n)^\perp$ on the wave front set of the extension. The wave front set $WF(t_n)$ is polarized and so is the conormal $(Td_n)^\perp$ hence the union $WF(t_n)\bigcup (Td_n)^\perp$ is also
polarized.
And the family
$\left(\lambda^{-s^\prime}\overline{t}_n\right)_{\lambda\in(0,1]}$ 
should be bounded in $\mathcal{D}^\prime_{\Gamma_n\bigcup (Td_n)^\perp}(U^n)$
where $\Gamma_n\bigcup (Td_n)^\perp$ is a \emph{good} conic set.
\end{proof}
The last theorem allows to conclude the recursion 
since we were able to initialize 
the recursion at the step $n=2$: 
$WF(t_2)$ is contained in a \emph{good cone} $\Gamma_2$ and  $\lambda^{2m}e^{\rho\log\lambda*}t_2(\phi^m\phi^m)$ is always bounded in $\mathcal{D}^\prime_{\Gamma_2}(U^2\setminus d_2)$, 
however beware that $t_2(\phi^m\phi^m)$
is in $E_{s^\prime}(U^2)$ for all $s^\prime<2m$, 
hence repeated applications of theorem (\ref{lasttheorem}) 
allows to define all extensions 
$\overline{t}_n\in \mathcal{D}^\prime(U^n)$ for all $n$.

\chapter{A conjecture by Bennequin.}
\section{Parametrizing the wave front set
of the extended distributions.}
In this 
short chapter,
we solve 
a conjecture of 
Daniel
Bennequin
stating
that the 
wave front set
of the extensions 
$\overline{t_n}$
are
singular Lagrangian
manifolds.

Lagrangians often appears in
quantum mechanics as the 
geometrical object living
in cotangent space which represents 
the semiclassical limit of 
quantum states (\cite{BatesWeinstein} p.~16, 35, 60-63 and \cite{EvansZworski} p.~103). 
Our theorem
might help 
us to give a similar geometric
interpretation 
of the wave front set of $n$-point functions
in quantum field theory:
each element
of the Lagrangian 
could represents
the ``trajectory 
of a
process"
in cotangent space.
For instance:
\begin{enumerate}
\item an element of the wave front set
of $t_2(\phi(x)\phi(y))$ represents
a null geodesic lifted
to the cotangent space,
\item an element 
of the wave front set
of $t_3(\phi(x_1)\phi(x_2)\phi^3(y)\phi(x_3))$
represents the interaction
of three null geodesics
intersecting at one point.
\end{enumerate}

The proof also
clarifies the fact 
that the wave front set
of these extensions 
can be
parametrized
by
objects (generalizing
the graph of a gradient)
called Morse families
which were introduced
by Weinstein
and H\"ormander.
\section{Morse families and Lagrangians.}
Let us start by recalling some simple definitions.
We introduce the concept 
(due to Weinstein see \cite{BatesWeinstein} Definition 4.17) 
of a Morse family
(with some modifications of our own):
\begin{defi}
A Morse family is a triple $\mathcal{S}=(\pi:B\mapsto M,S)$ 
satisfying
the following conditions:
\begin{itemize}
\item a) $\left(\pi:B\mapsto M\right)$ is
such that any connected component of $B$ 
is of the form 
$\left(\mathbb{R}^k\setminus \{0\}\right)\times \Omega$ 
for some $k$ and some set $\Omega\subset M$, 
this endows $B$
with the structure of a smooth cone and the restriction
of $\pi$ to this connected component 
is the canonical projection,
\item b) $S\in C^\infty(B)$ is 
homogeneous of degree $1$ 
w.r.t. vertical scaling, 
\item c) $dS\neq 0$. 
\end{itemize}
%the intersection $\mathbf{Gr} dS \cap \mathcal{E}^\perp $ is clean in $T^*B$ and $\lambda_{\mathcal{S}}\Sigma_{\mathcal{S}}$ is a Lagrange immersion of the smooth manifold $\Sigma_{\mathcal{S}}$ (which is the \textbf{critical set}) in $T^\star M$.  
\end{defi}
%for $\pi:B\mapsto M$ a fibration, introduce the exact sequence:
%$$0\rightarrow \mathcal{E} \rightarrow TB \rightarrow_{T\pi} TM\rightarrow 0 $$
%where $T\pi: TB \rightarrow TM$ and $\mathcal{E}=\ker T\pi$.
% Dually, we have the contravariant exact sequence
%$$ 0\leftarrow \mathcal{E}^* \leftarrow T^*B \leftarrow \mathcal{E}^\perp \leftarrow 0 $$ 
%where $\mathcal{E}^\perp$ is coisotropic in $T^*B$ and fibers over $T^*M$.
%Geometrically, $\mathcal{E}^\perp$ is a fiber bundle over $B$ where the fiber over a point $p\in B$ is the conormal of the fiber of $\pi:B\mapsto M$ intersecting $p$.  
Daniel Bennequin 
pointed out to us 
that this definition 
is actually very general since $B$ 
is \emph{not necessarily connected} 
thus we could have several connected components of $B$ 
living over some given point in $M$, like branches of a cover.
The second nice point of the definition of Alan Weinstein is that the map $\pi$ is not necessarily surjective.
%Our definition is slightly different from the definition of Alan Weinstein in that we require that the intersection of $dS$ with $\mathcal{E}^\perp$ to be \emph{clean} and not necessarily transverse, 
%we also demand that $\lambda_{\mathcal{S}}$ to be an immersion and not an embedding. 
%This is why a clean generating function $S$ can parametrize singular Lagrangians.  
Denote by $x$ the coordinates in $M$ and by $(x;\theta)$ the coordinates in $B$ where $\theta$ is the vertical variable. Denote by $\Sigma_S=\{\frac{\partial S}{\partial \theta}=0\}\subset B$ the critical set of $S$. The smooth projection $\pi$ defines a set $\pi(\Sigma_S)$ which
is the projection of the critical set.
\begin{defi}
We denote by $\mathcal{T}\pi(\Sigma_S)$ the \textbf{tangent cone} of $\pi(\Sigma_S)$ which is
defined as follows, for $x\in \pi(\Sigma_S)$, 
$$\mathcal{T}_x\pi(\Sigma_S)=\{d\pi|_{(x,\theta)}(X)|\exists \gamma\in C^1([0,1],\Sigma_S) \text{ s.t. }  \gamma(0)=(x,\theta),\dot{\gamma}(0)=X \},$$
then $\mathcal{T}\pi(\Sigma_S)=\bigcup_{x\in \pi(\Sigma_S)}\mathcal{T}_x\pi(\Sigma_S)$.
\end{defi}
\begin{ex}
For
$\mathcal{S}=\left(\mathbb{R}_{>0}\times (U^2\setminus d_2)\mapsto (U^2\setminus d_2), \theta \Gamma(x,y) \right)$, the set $\Sigma_S$ is equal to  
$\left(\{\Gamma=0\}\cap \left(U^2\setminus d_2\right)\right)\times \mathbb{R}_{>0}$ where $\{\Gamma=0\}$ is the null 
conoid in $U^2\setminus d_2$ i.e. 
the subset of pairs of points connected
by a null geodesic. Thus $\pi\left(\Sigma_S\right)=\{\Gamma=0\}|_{U^2\setminus d_2}$
is an open submanifold
and $\mathcal{T}\pi(\Sigma_S)$ is just the tangent space to the submanifold $\pi\left(\Sigma_S\right)=\{\Gamma=0\}\cap U^2\setminus d_2$.
\end{ex}
It is possible
to define a notion 
of tangent cone 
for very general sets
but we will not need such theory
here.
\begin{defi}
We denote by $N_{\pi(\Sigma_S)}$ 
the \textbf{normal} to $\pi(\Sigma_S)$ 
which is defined as the subset 
$\{(x,\xi)\in T^\star M|x\in \pi(\Sigma_S), 
\xi\left(\mathcal{T}_x\pi(\Sigma_S)\right)\geqslant 0 \}
\subset T^\star M$.  
\end{defi}
Throughout this section, for any cone $C$ in a vector space $E$, we denote by $C^\circ$ the cone
in dual space $E^\star$ defined as $\{\xi| \xi(C)\geqslant 0 \}$
(it is sometimes called the \emph{polar} of $C$).
This definition can be extended to cones in tangent space 
and we denote 
by $\mathcal{T}\pi(\Sigma_S)^\circ$ 
the subset 
$\underset{x\in \pi(\Sigma_S)}{\bigcup} \left(\mathcal{T}_x\pi(\Sigma_S)\right)^\circ$
living in $T^\bullet M$.
Geometrically, $N_{\pi(\Sigma_S)}$ is the \textbf{dual cone} $\mathcal{T}\pi(\Sigma_S)^\circ$ of the tangent cone $\mathcal{T}\pi(\Sigma_S)$. 
If $\pi$ is a smooth embedding, 
$N_\pi$ is just the conormal bundle of $\pi(\Sigma)$. 
\begin{defi}
We denote by $\lambda_{\mathcal{S}}$ the map $\lambda_{\mathcal{S}}:(x;\theta)\in B \mapsto (x;d_x S)(x,\theta)\in T^\star M $.
\end{defi}
In nice situations, $\lambda_S\left(\Sigma_S\right)$ is a smooth Lagrange immersion and coincides with $N_{\pi(\Sigma_S)}$. However in our general situation, we always have the following upperbound:
\begin{prop}
$\lambda_S\left(\Sigma_S\right)\subset N_{\pi(\Sigma_S)}$.
\end{prop}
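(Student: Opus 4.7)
My plan rests on a single observation: since $S$ is homogeneous of degree $1$ in the vertical variable $\theta$, Euler's identity gives $S(x,\theta) = \theta\cdot \partial_\theta S(x,\theta)$, so $S$ vanishes identically on the critical set $\Sigma_S = \{\partial_\theta S = 0\}$. In particular $S$ is constantly zero along any $C^1$ curve which stays in $\Sigma_S$, and this is the mechanism that will force $d_x S$ to annihilate every tangent vector to $\pi(\Sigma_S)$.

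First I would fix a point $(x_0,\theta_0)\in\Sigma_S$ and a vector $v\in\mathcal{T}_{x_0}\pi(\Sigma_S)$. By the curve-based definition of the tangent cone, I can pick a $C^1$ curve $\gamma:[0,1]\to\Sigma_S$, $\gamma(t)=(x(t),\theta(t))$, with $\gamma(0)=(x_0,\theta_0)$ and $d\pi(\dot\gamma(0)) = \dot x(0)=v$. Next I would differentiate the identity $S\circ\gamma \equiv 0$ at $t=0$ using the chain rule, which yields
\begin{equation*}
0 = \frac{d}{dt}\bigg|_{t=0}S(x(t),\theta(t)) = d_x S(x_0,\theta_0)(v) + \partial_\theta S(x_0,\theta_0)\bigl(\dot\theta(0)\bigr).
\end{equation*}
The second term vanishes because $\partial_\theta S = 0$ everywhere on $\Sigma_S$, hence $d_x S(x_0,\theta_0)(v) = 0$.

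Since $v$ was an arbitrary element of $\mathcal{T}_{x_0}\pi(\Sigma_S)$, the covector $d_x S(x_0,\theta_0)$ annihilates the entire tangent cone over $x_0$, and a fortiori satisfies $d_x S(x_0,\theta_0)\bigl(\mathcal{T}_{x_0}\pi(\Sigma_S)\bigr)\geqslant 0$. Combined with the obvious fact $x_0 = \pi(x_0,\theta_0)\in\pi(\Sigma_S)$, the very definition of $N_{\pi(\Sigma_S)}$ as the polar of the tangent cone gives $\lambda_S(x_0,\theta_0) = (x_0;d_x S(x_0,\theta_0))\in N_{\pi(\Sigma_S)}$, which is the desired inclusion.

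There is no real obstacle: the whole argument collapses onto the two facts $\partial_\theta S|_{\Sigma_S}=0$ (definition of the critical set) and $S|_{\Sigma_S}=0$ (Euler homogeneity of degree one). The one point that deserves care is that $\pi(\Sigma_S)$ need not be a smooth submanifold, so $\mathcal{T}_{x_0}\pi(\Sigma_S)$ is a genuine cone rather than a subspace; but the curve-based definition adopted in the excerpt is tailored precisely so that the chain-rule computation above remains valid vector by vector, and we obtain the stronger conclusion that $d_x S$ actually vanishes on (and not merely is nonnegative on) the tangent cone.
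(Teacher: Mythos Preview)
Your argument is essentially the paper's: both use Euler homogeneity of degree one to get $S|_{\Sigma_S}=0$, then kill the vertical contribution via $\partial_\theta S|_{\Sigma_S}=0$ to conclude that $d_xS$ annihilates projected tangent vectors. The paper packages this with a horizontal/vertical decomposition $X=X_h+X_v$ of vector fields on $B$, while you work directly with curves (which is arguably cleaner, since the tangent cone is defined via curves); the content is identical.

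One caveat worth flagging: by the paper's definition, a vector $v\in\mathcal{T}_{x_0}\pi(\Sigma_S)$ is witnessed by a curve in $\Sigma_S$ starting at \emph{some} lift $(x_0,\theta)$ of $x_0$, not necessarily at your chosen $(x_0,\theta_0)$. Your chain-rule step then delivers $d_xS(x_0,\theta)(v)=0$ at that lift rather than at $(x_0,\theta_0)$, so the clause ``with $\gamma(0)=(x_0,\theta_0)$'' assumes more than the definition directly grants. The paper's own proof has exactly the same structure and does not address this point either, so your argument is no less complete than the original; but it is a genuine subtlety present in both.
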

\begin{proof}
Any vector field in $X\in Vect(B)$ 
decomposes uniquely as a sum 
$X=X_h+X_v=f^\mu\partial_{x^\mu}+f^i\partial_{\theta^i}$ 
where $X_h$ is the horizontal part 
and $X_v$ the vertical part
since $B$ is a 
trivial cone. 
Thus it suffices to 
prove that 
if $d\left(\partial_{\theta^i}S \right)(X)|_{\Sigma_S}=0$ then 
$dS(X_h)|_{\Sigma_S}=0$ because $dS(X_h)_{x,\theta}= d_xS(d\pi_{x,\theta}(X))$.
The key observations
are:
\begin{itemize}
\item a) $\frac{\partial S}{\partial \theta^i}=0\implies \theta^i\frac{\partial S}{\partial \theta^i}=S=0 $,
since $S$
is homogeneous of degree $1$ in $\theta$, thus
$\Sigma_S\subset \{S=0\}$ and
$d\left(\partial_{\theta^i}S \right)(X)|_{\Sigma_S}=0\implies dS(X)|_{\Sigma_S}=0 $,
\item b) for all vertical vector field $X_v$, $dS(X_v)|_{\Sigma_S}=0$.
\end{itemize}
From these observations, 
we deduce that:
$$d\left(\frac{\partial S}{\partial \theta^i} \right)(X)|_{\Sigma_S}=0\implies dS(X)|_{\Sigma_S}=0 \implies  dS(X)|_{\Sigma_S}$$ $$= dS(X_h)|_{\Sigma_S}+ \underset{=0}{\underbrace{dS(X_v)|_{\Sigma_S}}=0}
\implies dS(X_h)|_{\Sigma_S}=0.$$
%If $d\pi|_{(x,\theta)}(X)\in \mathcal{T}_x\pi(\Sigma_S)$, then by definition $(X\frac{\partial S}{\partial \theta^i} )_i=0 \implies \sum_i \theta^iX\frac{\partial S}{\partial \theta^i}=0 $, thus
%$$0=\sum_i \theta^iX\frac{\partial S}{\partial \theta^i}=\sum_i \theta^if^\mu\frac{\partial^2 S}{\partial x^\mu\partial \theta^i}+ \theta^i f^j\frac{\partial^2 S}{\partial \theta^j\partial \theta^i} =\sum_i \theta^if^\mu\frac{\partial^2 S}{\partial x^\mu\partial \theta^i} +
%\underset{=0}{\underbrace{f^j\frac{\partial S}{\partial \theta^j} }}$$
%since $\theta^i\frac{\partial S}{\partial \theta^i}=S$ because $S$
%homogeneous of degree $1$,
%$$\quad =f^\mu \frac{\partial S}{\partial x^\mu}=d_xS|_{(x;\theta)}\left(d\pi|_{(x,\theta)}(X)\right)$$ since $(x;\theta)\in \Sigma_S$. This means $d_xS|_{(x;\theta)}\left(\mathcal{T}_x\pi(\Sigma_S)\right)=0$ ie $d_xS|_{(x;\theta)}\in N_{\pi(\Sigma_S),x}$.
\end{proof}
We want to prove that $\lambda_S\left(\Sigma_S\right)$ is isotropic 
in the sense 
that
the tangent cone of $\lambda_S\left(\Sigma_S\right)$
is symplectic orthogonal to
itself.
We denote by $\mathcal{T}_p\left(\lambda_S\left(\Sigma_S\right)\right)$ the subset defined as
$$\{d\lambda_S|_{(x,\theta)}(X)|\exists \gamma\in C^1([0,1],\Sigma_S) \text{ s.t. }  \gamma(0)=(x,\theta),\dot{\gamma}(0)=X\},$$
and $\mathcal{T}\left(\lambda_S\left(\Sigma_S\right)\right)=\bigcup_{p\in \lambda_S\left(\Sigma_S\right)}\mathcal{T}_p\left(\lambda_S\left(\Sigma_S\right)\right)$.
Let $\omega$ be the natural symplectic form in $T^\star M$:
\begin{prop}
$\omega|_{\lambda_{\mathcal{S}}\left(\Sigma_{\mathcal{S}}\right)}=0$.
\end{prop}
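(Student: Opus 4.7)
\medskip

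\noindent\textbf{Proof plan.} The plan is to compute $\lambda_{\mathcal{S}}^{\star}\omega$ directly using the canonical $1$-form $\alpha = \xi_{\mu}dx^{\mu}$ on $T^{\star}M$ and then restrict to $\Sigma_{\mathcal{S}}$. Since $\omega = -d\alpha$, it suffices to control $d(\lambda_{\mathcal{S}}^{\star}\alpha)$ on tangent vectors to $\Sigma_{\mathcal{S}}$ and then pass to the tangent cone $\mathcal{T}(\lambda_{\mathcal{S}}(\Sigma_{\mathcal{S}}))$ via the chain rule applied to smooth curves.

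\medskip

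\noindent First, I would unravel the definition of $\lambda_{\mathcal{S}}$ in coordinates $(x,\theta)$ on $B$: since $\lambda_{\mathcal{S}}(x,\theta) = (x, \partial_{x}S(x,\theta))$, the pull-back of the canonical $1$-form is
\begin{equation*}
\lambda_{\mathcal{S}}^{\star}\alpha \;=\; \frac{\partial S}{\partial x^{\mu}}(x,\theta)\,dx^{\mu}.
\end{equation*}
The key algebraic observation is then to rewrite this using the total differential of $S$: from
$dS = \frac{\partial S}{\partial x^{\mu}}dx^{\mu} + \frac{\partial S}{\partial \theta^{i}}d\theta^{i}$
one gets
$\lambda_{\mathcal{S}}^{\star}\alpha = dS - \frac{\partial S}{\partial \theta^{i}}d\theta^{i}$.
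Taking $d$ and using $d\circ d = 0$ yields
\begin{equation*}
\lambda_{\mathcal{S}}^{\star}\omega \;=\; -d\bigl(\lambda_{\mathcal{S}}^{\star}\alpha\bigr)\;=\; \sum_{i} d\!\left(\frac{\partial S}{\partial \theta^{i}}\right)\wedge d\theta^{i}.
\end{equation*}

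\medskip

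\noindent The second step is to restrict to $\Sigma_{\mathcal{S}}$. By the very definition of the critical set, each function $\partial_{\theta^{i}}S$ vanishes identically on $\Sigma_{\mathcal{S}}$, so its differential $d(\partial_{\theta^{i}}S)$ annihilates every tangent vector to $\Sigma_{\mathcal{S}}$, i.e.\ every $X\in T_{(x,\theta)}\Sigma_{\mathcal{S}}$ obtained as $\dot{\gamma}(0)$ for some $C^{1}$ curve $\gamma$ in $\Sigma_{\mathcal{S}}$. Hence for any two such tangent vectors $X_{1},X_{2}$,
\begin{equation*}
\bigl(\lambda_{\mathcal{S}}^{\star}\omega\bigr)(X_{1},X_{2}) \;=\; \sum_{i}\bigl[d(\partial_{\theta^{i}}S)(X_{1})\,d\theta^{i}(X_{2}) - d(\partial_{\theta^{i}}S)(X_{2})\,d\theta^{i}(X_{1})\bigr] \;=\; 0.
\end{equation*}
To transfer this conclusion to the (possibly singular) set $\lambda_{\mathcal{S}}(\Sigma_{\mathcal{S}})$, I would invoke the chain rule: any element of $\mathcal{T}_{p}(\lambda_{\mathcal{S}}(\Sigma_{\mathcal{S}}))$ is by definition of the form $d\lambda_{\mathcal{S}}(X)$ with $X$ tangent to $\Sigma_{\mathcal{S}}$ in the above sense, so $\omega(d\lambda_{\mathcal{S}}(X_{1}), d\lambda_{\mathcal{S}}(X_{2})) = (\lambda_{\mathcal{S}}^{\star}\omega)(X_{1},X_{2}) = 0$.

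\medskip

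\noindent The calculation itself is one line; there is no serious analytic obstacle. The only point of care, and the reason to use tangent cones rather than tangent spaces, is that neither $\Sigma_{\mathcal{S}}$ nor $\lambda_{\mathcal{S}}(\Sigma_{\mathcal{S}})$ need be a smooth submanifold under the very weak hypotheses on the Morse family (no transversality of $d(\partial_{\theta} S)$ is assumed, and $\pi$ need not be an embedding). What saves the argument is that the identity $d(\partial_{\theta^{i}}S)|_{\Sigma_{\mathcal{S}}} = 0$ is pointwise: it holds on any vector arising as the initial velocity of a smooth curve drawn inside $\Sigma_{\mathcal{S}}$, which is precisely the definition of the tangent cone used earlier in the section. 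Thus the vanishing of $\omega$ on $\mathcal{T}(\lambda_{\mathcal{S}}(\Sigma_{\mathcal{S}}))$ follows from the smooth-manifold computation without any regularity assumption.
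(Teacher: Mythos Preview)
Your proof is correct and follows essentially the same route as the paper's: both compute $\lambda_{\mathcal S}^\star\alpha = d_xS$, then use that $d_\theta S$ (equivalently, each coefficient $\partial_{\theta^i}S$) vanishes on $\Sigma_{\mathcal S}$ to conclude that $\lambda_{\mathcal S}^\star\omega$ restricts to zero there. Your version is slightly more explicit about the tangent-cone interpretation, and your sign convention $\omega=-d\alpha$ differs from the paper's $\omega=d\alpha$, but neither point affects the argument.
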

\begin{proof}
We actually prove 
that 
$\lambda_{\mathcal{S}}^\star\omega|_{\Sigma_{\mathcal{S}}}=0$
which implies
$\omega|_{\lambda_{\mathcal{S}}\left(\Sigma_{\mathcal{S}}\right)}=0$.
Let us denote by 
$\alpha=\xi_idx^i\in \Omega^1(T^\star M)$
the Liouville $1$-form
which is the primitive of $\omega$ 
i.e.
$d\alpha=\omega$. 
We decompose
uniquely
the differential
$d$ acting on $\Omega^\bullet (B)$
as a sum $d=d_x+d_\theta$. 
The key observation
is that
$d_\theta S|_{\Sigma_{\mathcal{S}}}=0$.
$$\lambda_{\mathcal{S}}^\star\omega|_{\Sigma_{\mathcal{S}}}= \lambda_{\mathcal{S}}^\star d\alpha|_{\Sigma_{\mathcal{S}}}=d \left(\lambda_{\mathcal{S}}^\star\alpha\right)|_{\Sigma_{\mathcal{S}}}=d \left(\lambda_{\mathcal{S}}^\star\xi_idx^i\right)|_{\Sigma_{\mathcal{S}}}
$$
$$= d\left(\frac{\partial S}{\partial x^i}dx^i\right)|_{\Sigma_{\mathcal{S}}}=d(d_xS)|_{\Sigma_{\mathcal{S}}}=d(d_xS+d_\theta S)|_{\Sigma_{\mathcal{S}}} $$
since $d_\theta S|_{\Sigma_{\mathcal{S} } }=0 $
$$=d^2 S|_{\Sigma_{\mathcal{S}}}=0.$$
%
%In local coordinates $(x^\mu;p_\mu)$ in $T^\star M$ the symplectic form $\omega$ reads
%$\omega=dx^\mu\wedge dp_\mu$.
%Let $(X_1,X_2)$ be such that $\left(d\lambda_S|_{(x,\theta)}(X_1),d\lambda_S|_{(x,\theta)}(X_2)\right)\in \mathcal{T}_p\left(\lambda_S\left(\Sigma_S\right)\right)^2$. In local coordinates $X_1=X_1^\mu\frac{\partial}{\partial x^\mu}+X_1^i\frac{\partial}{\partial \theta^i} $ thus
%$d\lambda_S|_{(x,\theta)}(X_1)= X_1^\mu\frac{\partial^2 S}{\partial x^\mu\partial x^\nu}\frac{\partial}{\partial p_\nu}+ X_1^\mu\frac{\partial}{\partial x^\mu}$ and
%$$\omega\left(d\lambda_S|_{(x,\theta)}(X_1),d\lambda_S|_{(x,\theta)}(X_2)\right)=X_1^\mu X_2^\nu\underset{=0}{\underbrace{\left(\frac{\partial^2 S}{\partial x^\mu\partial x^\nu}-\frac{\partial^2 S}{\partial x^\nu\partial x^\mu}\right)}}=0.$$
\end{proof}
This means that $\lambda_S\left(\Sigma_S\right)$ is \textbf{isotropic}.
At each point $x\in \pi(\Sigma_S)$ where
$\lambda_S\left(\Sigma_S\right)|_x=N_{\pi_S(\Sigma_S),x}$ we will say that 
$\lambda_S\left(\Sigma_S\right)$ is \textbf{Lagrangian} at $x$ because 
it is 
\textbf{isotropic} 
of maximal dimension. If it is \textbf{Lagrangian} at every $x\in \pi(\Sigma_S)$ 
(or on an open dense subset of $\pi(\Sigma_S)$)
then we call it \textbf{Lagrangian}, in nice situations this coincides with the usual
notion of Lagrange immersion (see \cite{Hormander} vol 3 p.~291,292 and \cite{BatesWeinstein}).
We will later consider Morse families $\mathcal{S}$ with the supplementary
requirements that
$\Sigma_S\subset B$ is a finite union of smooth submanifolds 
and
$\lambda_S\left(\Sigma_S\right)$ is \textbf{Lagrangian}.

We work out a fundamental example of Morse family 
which generates the conormal bundle of a submanifold.
\begin{ex}
Let $I\subset M$ be a submanifold. We shall work in local chart where the manifold is given by a system of $d$ equations $f_1=\dots=f_d=0$. Then the Morse triple $((\mathbb{R}^d\setminus\{0\})\times M\mapsto M, \sum_{i=1}^d\theta^if_i)$ parametrizes the conormal bundle $(TI)^\perp$. 
Indeed, $\Sigma_S=\{f_i=0\}\times (\mathbb{R}^d\setminus\{0\})=I\times (\mathbb{R}^d\setminus\{0\})$ and $\lambda_S\left(\Sigma_S\right)=\{\theta_i df_i|_{\Sigma_S},\theta\in \mathbb{R}^d\setminus\{0\}\}$.
The key observation is that
any element in the conormal
of $I$ should decompose in
the basis of 1-forms $(df_i)_i$
thus $\lambda_S\left(\Sigma_S\right)$ parametrizes
the conormal of $I$.
\end{ex} 
\paragraph{An analytic interpretation of $\lambda_S\left(\Sigma_S\right)$.}
We interpret $\lambda_S\left(\Sigma_S\right)$ in terms of the wave front set of an oscillatory integral $t$.
We can understand it as a parametrization of $WF(t)$ by the Morse family $\mathcal{S}$. 
\begin{prop}\label{Feynmanlag} 
Let $\mathcal{S}=(\pi:M\times \mathbb{R}^k\mapsto M,S)$ be a Morse family over the manifold $M$ and $(x;\theta)$ where $\theta\in\mathbb{R}^k$ a system of coordinates in $M\times \mathbb{R}^k$, 
for any asymptotic symbol $a$ (\cite{RS} vol 2 p.~99):
$$WF\left(\int_{\mathbb{R}^k} d\theta a(\cdot;\theta) e^{iS(\cdot,\theta)} \right)\subset \lambda_{\mathcal{S}}\Sigma_{S}.$$ 
\end{prop}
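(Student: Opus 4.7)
The plan is to follow the classical oscillatory integral strategy of H\"ormander, adapted to the Morse family setting so that the critical set becomes precisely the parametrization $\lambda_{\mathcal{S}}\Sigma_S$. The starting point is to fix a point $(x_0;\xi_0)\notin\lambda_{\mathcal{S}}\Sigma_S$ and exhibit a cut-off $\chi\in\mathcal{D}(M)$ with $\chi(x_0)\neq 0$ together with a closed conic neighborhood $V$ of $\xi_0$ such that $\widehat{\chi I}(\xi)$ decays faster than any polynomial on $V$, where $I(x)=\int a(x;\theta)e^{iS(x,\theta)}\,d\theta$. Writing the localized Fourier transform in double integral form,
\begin{equation*}
\widehat{\chi I}(\xi)=\int_{M}\int_{\mathbb{R}^k} \chi(x)a(x;\theta)\,e^{i\Phi(x,\theta;\xi)}\,dx\,d\theta,\qquad \Phi=S(x,\theta)-x\cdot\xi,
\end{equation*}
one sees that the critical set of $\Phi$ in the variables $(x,\theta)$ is exactly $\{(x,\theta)\,|\,d_\theta S=0,\ d_xS=\xi\}$, i.e.\ the fibre of $\lambda_{\mathcal{S}}\Sigma_S$ over $\xi$. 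So the hypothesis $(x_0;\xi_0)\notin\lambda_{\mathcal{S}}\Sigma_S$ says exactly that $\Phi$ has no critical points in $(x,\theta)$ for $(x,\xi)$ close to $(x_0,\xi_0)$.

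The next step is to quantify this non-criticality uniformly on the support of the integrand, and then apply the standard non-stationary phase argument by constructing a first order differential operator $L$ with $L(e^{i\Phi})=e^{i\Phi}$ of the form $L=\bigl(i|\nabla_{x,\theta}\Phi|^2\bigr)^{-1}\overline{\nabla_{x,\theta}\Phi}\cdot\nabla_{x,\theta}$, and integrating by parts $N$ times; each integration by parts produces a factor $|\nabla_{x,\theta}\Phi|^{-1}$, and combined with standard symbol estimates on $a$ and a suitable weight in $(\xi,\theta)$ this yields the rapid decrease $|\widehat{\chi I}(\xi)|\leqslant C_N(1+|\xi|)^{-N}$ on $V$.

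The main obstacle is that $\theta$ ranges over all of $\mathbb{R}^k\setminus\{0\}$, so the "compactness + closedness" argument used to pass from $(x_0;\xi_0)\notin\lambda_{\mathcal{S}}\Sigma_S$ to a uniform lower bound on $|\nabla_{x,\theta}\Phi|$ cannot be applied na\"\i vely. This is exactly where the homogeneity built into the definition of a Morse family pays off: $d_\theta S$ is homogeneous of degree $0$ in $\theta$ and $d_xS$ is homogeneous of degree $1$, so the whole criticality condition is scale invariant in $\theta$, and one can first restrict to the unit sphere $|\theta|=1$ where compactness gives $|d_\theta S|+|d_xS-\xi|\geqslant c>0$ on a neighborhood of $(x_0,\xi_0)$, and then re-homogenize. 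More precisely, I would estimate
\begin{equation*}
|\nabla_{x,\theta}\Phi(x,\theta;\xi)|\ \geqslant\ c\bigl(|\theta|+|\xi|\bigr)\qquad\text{uniformly on } \operatorname{supp}\chi\times\mathbb{R}^k\times V,
\end{equation*}
by splitting the region $|\theta|\geqslant R|\xi|$ (where $d_xS-\xi$ is dominated by the homogeneous term $d_xS$ and one uses the bound on the sphere) from the region $|\theta|\leqslant R|\xi|$ (where $\xi$ dominates and the non-criticality of the compact piece applies).

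With this weighted lower bound in hand, repeated integration by parts with the operator $L$ gains a factor $(|\theta|+|\xi|)^{-1}$ at each step, and the $d\theta$-integral converges because the symbol $a$ only grows polynomially in $\theta$; the number of iterations is chosen large relative to the order of $a$ and the desired decay $N$. The endpoint conclusion is that $\widehat{\chi I}$ decays rapidly on $V$, hence $(x_0;\xi_0)\notin WF(I)$, which proves $WF(I)\subset\lambda_{\mathcal{S}}\Sigma_S$. The proof also shows implicitly that the hypothesis $dS\neq 0$ in the definition of a Morse family is what ensures the operator $L$ is well-defined away from $\lambda_{\mathcal{S}}\Sigma_S$ on the conic region $|\theta|$ large.
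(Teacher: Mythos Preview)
Your proposal is correct in spirit and amounts to rederiving the classical non-stationary phase estimate for oscillatory integrals with homogeneous phase. The paper's own proof consists of a single line: it observes that in local coordinates this is exactly the content of Theorem~9.47, p.~102 in Reed--Simon vol.~2, and cites that result as a black box. So you have not taken a different route; you have unpacked the reference.

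One small technical quibble: the uniform lower bound you write,
\[
|\nabla_{x,\theta}\Phi(x,\theta;\xi)|\geqslant c\bigl(|\theta|+|\xi|\bigr),
\]
is slightly too optimistic as stated. Since $d_\theta S$ is homogeneous of degree~$0$ in~$\theta$, in the region where the non-criticality comes from $d_\theta S$ alone you only get $|\nabla_{x,\theta}\Phi|\geqslant c$ rather than growth of order~$|\theta|$. This does not break the argument: the standard integration-by-parts machinery (H\"ormander vol.~1, Thm.~7.8.2, or Duistermaat FIO, Prop.~2.1.1) is set up so that each application of $^tL$ lowers the \emph{symbol order} of the amplitude by~$1$ regardless, and after enough iterations the $\theta$-integral converges absolutely with the desired decay in~$\xi$. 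Your splitting into $|\theta|\gtrsim|\xi|$ and $|\theta|\lesssim|\xi|$ is exactly the right move; just be careful that the gain per step is in symbol order, not always in the pointwise size of $|\nabla\Phi|$.
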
 
\begin{proof}
In local coordinates $(x,\theta)$ for $B$, it is just a consequence of Theorem 9.47, p.~102 in \cite{RS}.
\end{proof}  
\paragraph{Functorial behaviour of Morse families.}
In microlocal geometry, we need the following fundamental operations on distributions 
\begin{itemize} 
\item the pull-back $t\mapsto f^\star t$ by a smooth map $f:M\rightarrow N$ which is not always well defined
for distributions
\item the exterior tensor product $(t_1,t_2)\mapsto t_1\boxtimes t_2$ which is always well defined
\item for our purpose, it will be important to add the product of distributions when it is well defined.
\end{itemize}
Assume that the wave front sets 
of given distributions $t$ 
are parametrized 
by Morse families, 
we already know how the wave front sets 
transform under 
these functorial operations on distributions, 
the question is 
whether we can find a new Morse family 
to parametrize the wave front set
of the distribution
obtained by one
of the previous 
operations.
The functorial behaviour of Lagrangians under geometric transformations is already studied in \cite{Geomasympt} Chapter 4, however it is not described in terms of generating functions and our point of view is more explicit and more oriented towards applications.
\subsubsection{Formal operations on Morse families.}
First 
introduce 
operations 
on cones
as follows.
Let $B\mapsto M$ be a smooth cone,
for any smooth map $f:N\mapsto M$, 
$f^\star B\mapsto f^\star M$ 
is a smooth cone 
(Appendix 2 of \cite{Geomasympt})   
with fibers defined as follows $ f^\star B|_x=B|_{f(x)}$.
We also introduce a suitable generalization
of the fiber product for cones, 
recall the fiber product of $\pi_1:B_1\mapsto M$ and $\pi_2:B_2\mapsto M$ denoted by $B_1\times_M B_2$ is defined by $\{(p_1,p_2)\in B_1\times B_2|\pi_1(p_1)=\pi_2(p_2) \}$.
\begin{defi}
Let $B_1,B_2$ be two smooth cones over a given base manifold $M$. Then we define the product
$B_1\overline{\times}_MB_2$ as the cone $$\left(\left(B_1\cup \underline{0}_{1}\right)\times_M \left( B_2\cup \underline{0}_{2}\right)\right)\setminus \left(\underline{0}_{1}\times_M\underline{0}_{2}\right)
=(B_1\times_M \underline{0}_{2})\cup(\underline{0}_{1}\times_M B_2)\cup \left(B_1\times_M B_2\right).$$
\end{defi}
The key point 
of this product is that
we add the zero section so
that our trivial cones 
become trivial vector bundles
we compute
the fiber product
and remove the zero section
at the end.
\paragraph{The QFT case.}
In our recursion, we only need to pull-back by smooth projections. 
For instance, by the canonical projection maps $M^n\mapsto M^I$ for $I\subset [n]$.
In this case, if we still denote by $f$ the submersion $f:N\mapsto M$, the Morse family 
can be chosen extremely simple 
\begin{defi}\label{pull-backMorse}
Let $\mathcal{S}=(\pi:B\mapsto M,S)$ be a Morse family over the manifold $M$,
for any smooth \textbf{projection} $f:N\mapsto M$,
we define the pulled back Morse family as the triple
\begin{equation}
f^\star\mathcal{S}=(f^\star\pi:f^\star B\mapsto f^\star M,f^\star S).
\end{equation}
\end{defi}
It is obvious that
$df^\star S\neq 0$ since $dS\neq 0$ and $df$ is \textbf{surjective}.
When $f$ is a smooth map, we prove that the pull-back by $f$ of  $\lambda_{\mathcal{S}}\Sigma_{\mathcal{S}}$ is parametrized by the Morse family $f^\star\mathcal{S}$: 
\begin{prop}\label{pull-back}
Let $f:=N\mapsto M$ be a smooth projection and $\mathcal{S}=(\pi:B\mapsto M,S)$ a Morse family over the manifold $M$. Then:
\begin{equation}
f^\star\lambda_{\mathcal{S}}\Sigma_{\mathcal{S}}
=\lambda_{f^\star\mathcal{S}}\Sigma_{f^\star\mathcal{S}}.
\end{equation}
\end{prop}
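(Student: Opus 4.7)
The plan is to unravel the two sides via the chain rule and observe they describe the same subset of $T^\star N$. First I would fix local coordinates: write $y$ for coordinates on $N$, $x$ for coordinates on $M$, and $\theta$ for the vertical fiber coordinate on $B$. By Definition \ref{pull-backMorse} we have $f^\star B = \{(y;\theta) \mid (f(y);\theta)\in B\}$ with fiber $f^\star B|_y = B|_{f(y)}$, and $(f^\star S)(y,\theta) = S(f(y),\theta)$. Since $f$ is a submersion, $df_y$ is surjective, so the non-vanishing of $dS$ transfers to $d(f^\star S)$, and thus $f^\star\mathcal{S}$ is indeed a Morse family.

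Next I would identify the critical set. By the chain rule $\partial_{\theta^i}(f^\star S)(y,\theta) = (\partial_{\theta^i}S)(f(y),\theta)$, so
\begin{equation*}
\Sigma_{f^\star\mathcal{S}} = \{(y;\theta) \mid (f(y);\theta)\in \Sigma_{\mathcal{S}}\}.
\end{equation*}
Applying $\lambda_{f^\star\mathcal{S}}$ and using the chain rule once more, $d_y(f^\star S) = d_xS\big(f(y),\theta\big)\circ df_y$, gives
\begin{equation*}
\lambda_{f^\star\mathcal{S}}\Sigma_{f^\star\mathcal{S}} = \bigl\{\bigl(y;\, d_xS(f(y),\theta)\circ df_y\bigr) \,\bigm|\, (f(y),\theta)\in \Sigma_{\mathcal{S}}\bigr\}.
\end{equation*}

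On the other hand, from the definition of pull-back of conic subsets of the cotangent bundle recalled earlier in the excerpt,
\begin{equation*}
f^\star\bigl(\lambda_{\mathcal{S}}\Sigma_{\mathcal{S}}\bigr) = \bigl\{(y;\xi\circ df_y) \,\bigm|\, (f(y);\xi)\in \lambda_{\mathcal{S}}\Sigma_{\mathcal{S}}\bigr\},
\end{equation*}
and an element $(f(y);\xi)$ lies in $\lambda_{\mathcal{S}}\Sigma_{\mathcal{S}}$ precisely when there is some $\theta$ with $(f(y),\theta)\in\Sigma_{\mathcal{S}}$ and $\xi = d_xS(f(y),\theta)$. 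Substituting this $\xi$ shows that the two sets coincide.

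The only genuinely delicate point, and what I would flag as the main obstacle, is ensuring the Morse-family axioms (a)--(c) survive pull-back: smoothness of the cone structure on $f^\star B$ and homogeneity of $f^\star S$ in $\theta$ are automatic, but $d(f^\star S)\neq 0$ requires the submersion hypothesis on $f$ (for a general smooth map, a horizontal differential $d_yS(f(y),\theta)$ could vanish even when $d_xS \ne 0$). Thus the argument is really a clean chain-rule computation once one has secured, via surjectivity of $df$, that $f^\star\mathcal{S}$ is a bona fide Morse family; no genuine geometric content beyond the functoriality of the differential is needed.
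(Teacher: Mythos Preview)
Your argument is correct and follows essentially the same chain-rule computation as the paper: identify the pulled-back critical set $\Sigma_{f^\star\mathcal{S}}$, apply $d_y(f^\star S)=d_xS\circ df_y$, and match the result against the definition of $f^\star(\lambda_{\mathcal{S}}\Sigma_{\mathcal{S}})$. Your extra care in verifying that $f^\star\mathcal{S}$ is a Morse family (via surjectivity of $df$) is a point the paper dispatches in a single line just before the proposition, so the two proofs are effectively identical.
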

\begin{proof}
We denote by $(y;\eta)$ the coordinates in $T^\star N$ and 
$(x;\xi)$ the coordinates in $T^\star M$.
We have 
$$f^*\left(\lambda_{S}\Sigma_{S}\right)= \{(y;\eta\circ df) | (f(y);\eta) \in \lambda_{\mathcal{S}}\Sigma_{\mathcal{S}} \}$$
by the definition of pull-back in \cite{Hormander} and \cite{Geomasympt} 
$$ =\{(y;d_xS_{(f(y);\theta)}\circ df) |\, d_\theta S(f(y);\theta)=0 \} $$
$$=\{(y;d\left(S\circ f\right)_{(y;\theta)} |\, d_\theta \left(S\circ f\right)(y;\theta)=0 \}$$ 
$$=\lambda_{f^*\mathcal{S}}\Sigma_{f^*\mathcal{S}}  $$
by definition of $\lambda_{f^*\mathcal{S}}\Sigma_{f^*\mathcal{S}}$. 
\end{proof}
\begin{prop}
Under the assumptions of proposition (\ref{pull-back}), 
if $\lambda_S\left(\Sigma_S\right)$ is Lagrangian then
$\lambda_{f^\star S}\Sigma_{f^\star S}$ is Lagrangian.
\end{prop}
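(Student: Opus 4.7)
The strategy is to combine Proposition \ref{pull-back} with the intrinsic characterization of the Lagrangian property in terms of the normal (polar) cone $N_{\pi(\Sigma_{\mathcal{S}})}$, and then exploit the fact that the submersion $f$ pulls back conormals to conormals. Concretely, we already know from Proposition \ref{pull-back} that
\begin{equation*}
\lambda_{f^\star\mathcal{S}}\Sigma_{f^\star\mathcal{S}} \;=\; f^\star\bigl(\lambda_{\mathcal{S}}\Sigma_{\mathcal{S}}\bigr),
\end{equation*}
and from the general isotropy proposition that $\omega|_{\lambda_{f^\star\mathcal{S}}\Sigma_{f^\star\mathcal{S}}}=0$. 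What remains is the \emph{maximal-dimensional} part of the Lagrangian property, i.e.\ verifying the equality $\lambda_{f^\star\mathcal{S}}\Sigma_{f^\star\mathcal{S}}|_y = N_{\pi(\Sigma_{f^\star\mathcal{S}}),\,y}$ on an open dense set of base points.

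The first step is a direct calculation of the base set: the defining equation $d_\theta(S\circ f)(y,\theta)=d_\theta S(f(y),\theta)=0$ shows
\begin{equation*}
\pi(\Sigma_{f^\star\mathcal{S}}) \;=\; f^{-1}\!\bigl(\pi(\Sigma_{\mathcal{S}})\bigr).
\end{equation*}
The second step is to establish the functoriality of tangent cones under a submersion: for any subset $Z\subset M$ and $y\in f^{-1}(Z)$, the fact that $f$ admits local smooth right inverses (implicit function theorem) together with the inclusion $f^{-1}(f(y))\subset f^{-1}(Z)$ (which gives $\ker df_y\subset \mathcal{T}_y f^{-1}(Z)$) yields
\begin{equation*}
\mathcal{T}_y f^{-1}(Z) \;=\; (df_y)^{-1}\bigl(\mathcal{T}_{f(y)} Z\bigr).
\end{equation*}
Applying this to $Z=\pi(\Sigma_{\mathcal{S}})$ (regarded as a union of smooth pieces where Lagrangianity holds) and dualizing is the third step: the containment $\ker df_y\subset \mathcal{T}_y f^{-1}(Z)$ forces every $\eta$ in the polar cone to vanish on $\ker df_y$, hence to factor as $\eta=\xi\circ df_y$ for a unique $\xi\in T^\star_{f(y)}M$, and then $\xi(\mathcal{T}_{f(y)}Z)\geqslant 0$; conversely every such $\xi\circ df_y$ lies in the polar. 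This gives
\begin{equation*}
N_{f^{-1}(Z),\,y} \;=\; f^\star N_{Z}\bigr|_y.
\end{equation*}

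The last step merely assembles these ingredients. Using the hypothesis $\lambda_{\mathcal{S}}\Sigma_{\mathcal{S}}|_{f(y)}=N_{\pi(\Sigma_{\mathcal{S}}),\,f(y)}$ on a dense open subset $U\subset \pi(\Sigma_{\mathcal{S}})$ and the fact that $f^{-1}(U)$ is dense open in $f^{-1}(\pi(\Sigma_{\mathcal{S}}))=\pi(\Sigma_{f^\star\mathcal{S}})$ (a submersion is open), we get for $y\in f^{-1}(U)$:
\begin{equation*}
\lambda_{f^\star\mathcal{S}}\Sigma_{f^\star\mathcal{S}}\bigr|_y = f^\star\bigl(\lambda_{\mathcal{S}}\Sigma_{\mathcal{S}}\bigr)\bigr|_y = f^\star N_{\pi(\Sigma_{\mathcal{S}})}\bigr|_y = N_{f^{-1}(\pi(\Sigma_{\mathcal{S}})),\,y} = N_{\pi(\Sigma_{f^\star\mathcal{S}}),\,y},
\end{equation*}
which is exactly the Lagrangian property for $\lambda_{f^\star\mathcal{S}}\Sigma_{f^\star\mathcal{S}}$. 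The main obstacle is the middle step: carefully verifying the tangent-cone identity $\mathcal{T}_y f^{-1}(Z)=(df_y)^{-1}(\mathcal{T}_{f(y)}Z)$ when $Z=\pi(\Sigma_{\mathcal{S}})$ is only a union of smooth strata rather than a single manifold, and then handling the polarization so that the $\geqslant 0$ (as opposed to $=0$) definition of the normal cone is treated coherently. Once local triviality of $f$ reduces the issue to a product chart $N\simeq M\times \mathbb{R}^{\dim N-\dim M}$, the identity is immediate and the rest of the argument is formal.
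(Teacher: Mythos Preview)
Your proof is correct and follows essentially the same route as the paper: both arguments reduce to the identity $f^\star N_{\pi(\Sigma_S)}=N_{\pi(\Sigma_{f^\star S})}$, established via the relation between the tangent cones of $\pi(\Sigma_S)$ and of its preimage under the submersion $f$, followed by polarization. You are more explicit than the paper about the inclusion $\ker df_y\subset \mathcal{T}_y f^{-1}(Z)$ (needed for the dualization step) and about passing to an open dense subset, but the underlying argument is the same.
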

\begin{proof}
$$\lambda_{f^\star\mathcal{S}}\Sigma_{f^\star\mathcal{S}}
=f^\star\lambda_{\mathcal{S}}\Sigma_{\mathcal{S}}\text{ by the above proposition}$$ 
$$\quad =f^\star N_{\pi(\Sigma_S)}\text{ because $\lambda_{\mathcal{S}}\Sigma_{\mathcal{S}}$ Lagrangian} $$
$$\quad=N_{\pi(\Sigma_S)}\circ df \text{ by definition of the pull-back} $$
$$\quad=\mathcal{T}\pi(\Sigma_{f^\star S})^\circ  \circ df\text{ by definition of $N_{\pi(\Sigma_S)}$} $$
$$\quad = \mathcal{T}\pi_{f^\star S}(\Sigma_{f^\star S})^\circ \text{ since $\mathcal{T}\pi_S(\Sigma_S)=Df\mathcal{T}\pi_{f^\star S}(\Sigma_{f^\star S})$}$$
$$\quad = N_{\pi(\Sigma_{f^\star S})} \text{ by definition of } N_{\pi(\Sigma_{f^\star S})}  .$$
Finally, $\lambda_{f^\star\mathcal{S}}\Sigma_{f^\star\mathcal{S}}
=N_{\pi(\Sigma_{f^\star S})}$ means, by definition, that $\lambda_{f^\star\mathcal{S}}\Sigma_{f^\star\mathcal{S}}$
is Lagrangian.
\end{proof}
\begin{prop}
Under the assumptions of proposition (\ref{pull-back}), 
if $\Sigma_S$ is a smooth submanifold (resp finite union of smooth submanifolds) in $B$
then
$\Sigma_{f^\star S}$ is also a smooth submanifold (resp finite union of smooth submanifolds)
in $f^\star B$.
\end{prop}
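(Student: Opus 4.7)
The plan is to express $\Sigma_{f^\star S}$ as a preimage of $\Sigma_S$ under the natural lifted map $\tilde f : f^\star B \to B$ covering $f$, and then to apply the preimage theorem using the fact that $f$ (hence $\tilde f$) is a submersion. Concretely, the pulled-back cone is $f^\star B = \{(y,p)\in N\times B \,|\, f(y)=\pi(p)\}$, with natural projection $\tilde f : f^\star B \to B$, $\tilde f(y,p)=p$. In the trivializing coordinates used in the excerpt, where $B$ is locally of the form $M\times(\mathbb{R}^k\setminus\{0\})$ with vertical coordinate $\theta$, we simply have $f^\star B$ locally of the form $N\times(\mathbb{R}^k\setminus\{0\})$, $f^\star S(y,\theta)=S(f(y),\theta)$, and $\tilde f(y,\theta)=(f(y),\theta)$.

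First I would observe the key identity $\partial_\theta(f^\star S)(y,\theta)=\partial_\theta S(f(y),\theta)$, which is immediate because $\theta$ is not touched by $f$. This yields the set-theoretic equality
\begin{equation*}
\Sigma_{f^\star S} = \{(y,\theta)\in f^\star B \,|\, \partial_\theta S(f(y),\theta)=0\} = \tilde f^{-1}(\Sigma_S).
\end{equation*}
Next I would check that $\tilde f$ is a submersion: in the local product coordinates above, $\tilde f$ has the form $(y,\theta)\mapsto(f(y),\theta)$, and since $f$ is by assumption a smooth projection (hence a submersion), its differential is surjective on the horizontal factor, and $\tilde f$ is the identity on the vertical factor. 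Therefore the differential $d\tilde f$ is surjective at every point of $f^\star B$.

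Then, assuming $\Sigma_S$ is a smooth submanifold of $B$, the classical preimage/transversality theorem (a submersion is transverse to every submanifold of the target) guarantees that $\Sigma_{f^\star S}=\tilde f^{-1}(\Sigma_S)$ is a smooth submanifold of $f^\star B$, with codimension equal to that of $\Sigma_S$ in $B$. For the case where $\Sigma_S=\bigcup_{\alpha\in A}\Sigma_S^\alpha$ is a finite union of smooth submanifolds, one simply applies the preceding argument to each piece and writes $\Sigma_{f^\star S}=\bigcup_{\alpha\in A}\tilde f^{-1}(\Sigma_S^\alpha)$, which is a finite union of smooth submanifolds of $f^\star B$.

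The argument is almost entirely formal; the only point that requires a moment of care is the verification that the local product structure of $B$ is compatible with pullback along $f$, so that the lifted map $\tilde f$ really is a submersion in the sense needed for the preimage theorem. No serious obstacle is expected beyond checking this compatibility, which is immediate from the definition of a smooth cone and from Definition \ref{pull-backMorse}.
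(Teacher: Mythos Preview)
Your proof is correct. The paper's own proof is a one-line rank argument: since $f$ is a submersion, the differential $d_{y,\theta}(d_\theta(S\circ f))$ has the same rank as $d_{x,\theta}(d_\theta S)$, so the zero locus $\Sigma_{f^\star S}$ is a submanifold whenever $\Sigma_S$ is. Your approach is a repackaging of the same idea via the preimage theorem: you identify $\Sigma_{f^\star S}=\tilde f^{-1}(\Sigma_S)$ and use that $\tilde f=(f,\mathrm{id}_\theta)$ is a submersion. Both arguments rest on the surjectivity of $df$, but yours has the minor advantage of not invoking how $\Sigma_S$ is cut out by $d_\theta S$: you only use that $\Sigma_S$ is a submanifold, which is exactly the hypothesis. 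The two routes are essentially equivalent, with yours spelled out in more detail.
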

\begin{proof}
This is immediate since $d_{y;\theta}(d_\theta(S\circ f))$ has the same rank as
$d_{x,\theta}S$.
\end{proof}
%For any smooth proper submersion $f:M\mapsto N$,
%we define the "pushforward Morse family" as the triple
%\begin{equation}
%f_\star\mathcal{S}=(\pi\circ f: B\mapsto N , S)
%\end{equation}

%Let $\mathcal{S}_{i}=(\pi_i:B_i\mapsto M_i,S_i),i=(1,2)$ be a pair Morse family over the manifolds $M_1,M_2$, then we define the exterior product of the Morse family $(\mathcal{S}_{1},\mathcal{S}_{2})$ as the triple
%\begin{equation}
%\mathcal{S}_{1}\boxtimes\mathcal{S}_{2}=(\pi_1\overline{\times}\pi_2: B_1 \overline{\times} B_2\mapsto M_1\times M_2 , S_1+S_2)
%\end{equation}
%and this is always clean Morse family.
Let $\mathcal{S}_{i}=(\pi_i:B_i\mapsto M,S_i),i=(1,2)$ be a pair of Morse families over the manifold $M$, then we define the ``sum of the Morse families'' $\mathcal{S}_{1}+\mathcal{S}_{2}$ as the triple
\begin{equation}
\mathcal{S}_{1} + \mathcal{S}_{2}=(\pi_1\overline{\times}_M\pi_2: B_1 \overline{\times}_M B_2\mapsto M , S_1+S_2).
\end{equation}
We put quotation marks ``'' to stress the fact that this operation still defines a triple (cone, base manifold, function) but this triple is not necessarily a Morse family since we do not know if $d(S_1+S_2)\neq 0$, we will see that a necessary and sufficient condition for $\mathcal{S}_{1}+\mathcal{S}_{2}$ to be a Morse family is that
$\lambda_{S_1}\Sigma_{S_1}\cap -\lambda_{S_2}\Sigma_{S_2}=\emptyset$ which is the H\"ormander condition.
%We call such triple a formal Morse family. 
\paragraph{Remark on sums of Morse families.}
Notice by definition that if the cone $B_{i},i=(1,2)$ corresponding to the Morse family $\mathcal{S}_{i}$ has $n_i$ connected components, then $B_1 \overline{\times}_M B_2$ has
$(n_1+1)(n_2+1)-1$ connected components. An immediate recursion yields that
the cone corresponding to the sum $\mathcal{S}_{1} +\dots+ \mathcal{S}_{k}$
has $\left((n_1+1)\dots (n_k+1)\right)-1$ connected components.
\subsubsection{Transversality lemmas.}
We recall the classical notion of transversality 
in differential geometry in our context (see \cite{Lee} Definition 2.48 p.~80).
Let $\Sigma_i,i=(1,2)$ be a pair of smooth manifolds and $\pi_i:\Sigma_i\mapsto M,i=(1,2)$ be a pair of smooth maps. In such case for every $x\in \pi_i(\Sigma_i)$, the tangent cones $\mathcal{T}_x\pi_i(\Sigma_i),i=(1,2)$ are \textbf{vector
subspaces} of $T_xM$ 
(a vector subspace has less structure 
than a cone).
\begin{defi}
$\pi_1$ and $\pi_2$ are called \textbf{transverse} if for all $x\in \pi_1(\Sigma_1)\cap\pi_2(\Sigma_2)$, $\mathcal{T}_x\pi_1(\Sigma_1)+\mathcal{T}_x\pi_2(\Sigma_2)=T_xM$.
\end{defi}
\begin{lemm}\label{transv1}
Let $\Sigma_i,i=(1,2)$ be a pair of smooth submanifolds in $B_i$ and $\pi_i:B_i\mapsto M,i=(1,2)$ be a pair of smooth maps.
If $\pi_1$ and $\pi_2$ are \textbf{transverse} then $\Sigma_1\times_M\Sigma_2 $ is a smooth submanifold in $B_1\times_M B_2$.
\end{lemm}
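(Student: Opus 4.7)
The plan is to realize $\Sigma_1 \times_M \Sigma_2$ as a transverse intersection of two smooth submanifolds inside the ambient product $B_1 \times B_2$, and then invoke the classical fact that a transverse intersection of smooth submanifolds is itself a smooth submanifold. The two submanifolds in question are $\Sigma_1 \times \Sigma_2$ and $B_1 \times_M B_2$; observe that $\Sigma_1 \times_M \Sigma_2 = (\Sigma_1 \times \Sigma_2) \cap (B_1 \times_M B_2)$ as sets.

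First, since each $B_i$ is a trivial smooth cone over $M$ and $\pi_i$ is the canonical (submersive) projection, the fiber product $B_1 \times_M B_2$ is automatically cut out transversely in $B_1 \times B_2$ by the diagonal condition, and is therefore a smooth submanifold whose tangent space at $(p_1,p_2)$ is $\{(w_1,w_2) \in T_{p_1}B_1 \times T_{p_2}B_2 \telque d\pi_1(w_1) = d\pi_2(w_2)\}$. Second, $\Sigma_1 \times \Sigma_2$ is a smooth submanifold of $B_1 \times B_2$ as a product of smooth submanifolds, with tangent space $T_{p_1}\Sigma_1 \times T_{p_2}\Sigma_2$.

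The heart of the argument is to verify that these two submanifolds meet transversely at every $(p_1,p_2) \in \Sigma_1 \times_M \Sigma_2$. Set $x = \pi_1(p_1) = \pi_2(p_2)$. Given an arbitrary $(v_1,v_2) \in T_{p_1}B_1 \times T_{p_2}B_2$, I wish to write $(v_1,v_2) = (u_1,u_2) + (w_1,w_2)$ with $(u_1,u_2) \in T_{p_1}\Sigma_1 \times T_{p_2}\Sigma_2$ and $d\pi_1(w_1) = d\pi_2(w_2)$. Setting $w_i = v_i - u_i$, this reduces to the existence of $u_i \in T_{p_i}\Sigma_i$ satisfying $d\pi_1(v_1) - d\pi_2(v_2) = d\pi_1(u_1) - d\pi_2(u_2)$ in $T_xM$, and exactly here the transversality hypothesis $\mathcal{T}_x\pi_1(\Sigma_1) + \mathcal{T}_x\pi_2(\Sigma_2) = T_xM$ supplies the decomposition. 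The preimage theorem then yields that $\Sigma_1 \times_M \Sigma_2$ is a smooth submanifold of $B_1 \times B_2$, and since it is contained in $B_1 \times_M B_2$ it is a fortiori a smooth submanifold of the latter.

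The only delicate point I expect is bookkeeping around the tangent cone $\mathcal{T}_x\pi_i(\Sigma_i)$ defined via $C^1$ curves. Because $\Sigma_i$ is assumed smooth, these cones coincide with the linear subspaces $d\pi_i(T_{p_i}\Sigma_i)$ (at least after restricting to a fixed branch $p_i$ above $x$, which is all that matters for the local argument at $(p_1,p_2)$), so the transversality hypothesis translates immediately into the linear surjectivity statement used above. Once this identification is made, the proof is purely linear algebra plus the preimage theorem, with no further obstruction.
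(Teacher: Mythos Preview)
Your proof is correct and follows essentially the same approach as the paper: both realize $\Sigma_1\times_M\Sigma_2$ as the intersection $(\Sigma_1\times\Sigma_2)\cap(B_1\times_M B_2)$ inside $B_1\times B_2$ and verify that this intersection is transverse using the hypothesis $\mathcal{T}_x\pi_1(\Sigma_1)+\mathcal{T}_x\pi_2(\Sigma_2)=T_xM$. Your linear-algebraic decomposition of an arbitrary tangent vector is somewhat more explicit than the paper's verification, but the argument is the same.
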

Lemma \ref{transv1} 
obviously generalizes
to the case $\Sigma_i$ is 
a finite union of submanifolds,
in which case $\Sigma_1\times_M \Sigma_2$
is a finite union
of submanifolds.

\begin{proof}
Denote by $\Delta$ the diagonal in $M\times M$. Then $B_1\times_MB_2$ can be identified with the inverse image $(\pi_1\times\pi_2)^{-1} (\Delta)=B_1\times_\Delta B_2\subset B_1\times B_2 $ which is always a submanifold
of $B_1\times B_2$ and the fiber product $\Sigma_1\times_M\Sigma_2$ is just the intersection
$(\Sigma_1\times\Sigma_2)\bigcap (B_1\times_\Delta B_2)$ in $B_1\times B_2$. So we view both $\Sigma_1\times\Sigma_2$ and $B_1\times_\Delta B_2$ as submanifolds sitting inside $B_1\times B_2$,  
a sufficient condition
for $(\Sigma_1\times\Sigma_2)\bigcap (B_1\times_\Delta B_2)$ to be a submanifold of 
$B_1\times_\Delta B_2$ is that the intersection is transverse (it is a classical result of transversality theory that the transversal intersection of two submanifolds is a submanifold of the two initial submanifolds, it is a particular case of Theorem 2.47 in \cite{Lee} for an embedding also 
see Theorem 3.3 p.~22 in \cite{Hirsch}). It is immediate to check
that at every point $(p_1,p_2)$ of the intersection $(\Sigma_1\times\Sigma_2)\bigcap (B_1\times_\Delta B_2)$, 
$T_{p_1,p_2}(\Sigma_1\times\Sigma_2)+T_{p_1,p_2}(B_1\times_\Delta B_2)=T_{p_1,p_2}(B_1\times B_2)$ since $D(\pi_1\times \pi_2)(\Sigma_1\times \Sigma_2)=T_x\Delta$ by transversality of $\pi_1(\Sigma_1),\pi_2(\Sigma_2)$ and $T_{p_1,p_2}(B_1\times_{\Delta}B_2)$ spans the vertical tangent space of the bundle $B_1\times B_2$.
\end{proof}
For each smooth map $\pi:\Sigma_{S}\mapsto M$, we recall the definition of 
the normal to $\pi(\Sigma)$: $N_{\pi(\Sigma)}\subset T M$ as the subset $\bigcup_{x\in\pi(\Sigma)} \mathcal{T}_x\pi(\Sigma)^\circ$ in $T^\star M$ which is the dual cone in cotangent space of the tangent cone $\mathcal{T}\pi(\Sigma)$. 
We set $N^\bullet_{\pi(\Sigma)}=N_{\pi(\Sigma)}\cap T^\bullet M$.
\begin{lemm}\label{transv2}
Assume $\Sigma_i,i=(1,2)$ are smooth manifolds and
$\pi_i:\Sigma_i\mapsto M$ are smooth maps, 
then $\pi_1,\pi_2$ are \textbf{transverse} 
if and only if 
$N^\bullet_{\pi_1}\cap -N^\bullet_{\pi_2}=\emptyset$.
\end{lemm}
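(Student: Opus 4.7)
The plan is to reduce the statement to the elementary linear-algebraic identity $(W_1+W_2)^\perp = W_1^\perp\cap W_2^\perp$ applied at each point $x\in \pi_1(\Sigma_1)\cap\pi_2(\Sigma_2)$. The first thing to notice is that under the hypothesis that the $\Sigma_i$ are smooth manifolds and the $\pi_i$ are smooth maps, the tangent cone
$$\mathcal{T}_x\pi_i(\Sigma_i)=\bigcup_{p\in\pi_i^{-1}(x)} d\pi_i(T_p\Sigma_i)$$
is a union of vector subspaces of $T_xM$; in particular it is symmetric under $v\mapsto -v$. Consequently the one-sided polar condition $\xi(\mathcal{T}_x\pi_i(\Sigma_i))\geqslant 0$ used in the definition of $N_{\pi(\Sigma)}$ automatically forces $\xi$ to vanish on $\mathcal{T}_x\pi_i(\Sigma_i)$, so $N_{\pi_i}|_x$ coincides with the annihilator $W_i^\perp$ of the vector subspace $W_i:=\mathrm{span}\,\mathcal{T}_x\pi_i(\Sigma_i)\subset T_xM$. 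This is a linear subspace of $T_x^\star M$, hence $-N_{\pi_i}=N_{\pi_i}$.

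Granted this reduction, I would prove both implications by contraposition. For the direct sense, suppose transversality holds, i.e.\ $W_1+W_2=T_xM$ for all $x\in\pi_1(\Sigma_1)\cap\pi_2(\Sigma_2)$, and pick a purported $(x,\xi)\in N^\bullet_{\pi_1}\cap -N^\bullet_{\pi_2}$. Then $\xi\neq 0$, and the membership condition forces $x\in\pi_1(\Sigma_1)\cap\pi_2(\Sigma_2)$, so
$$\xi\in N_{\pi_1}|_x\cap N_{\pi_2}|_x=W_1^\perp\cap W_2^\perp=(W_1+W_2)^\perp=\{0\},$$
a contradiction. For the converse, assume $N^\bullet_{\pi_1}\cap -N^\bullet_{\pi_2}=\emptyset$, and suppose transversality fails at some $x$, so $W_1+W_2\subsetneq T_xM$. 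Then $(W_1+W_2)^\perp=W_1^\perp\cap W_2^\perp$ is a nonzero subspace, and any nonzero $\xi$ in it yields $(x,\xi)\in N^\bullet_{\pi_1}\cap N^\bullet_{\pi_2}$, which by the symmetry $-N^\bullet_{\pi_2}=N^\bullet_{\pi_2}$ gives a point of the supposedly empty intersection.

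The only delicate point in the plan is Step~1, namely the justification that the cones $\mathcal{T}_x\pi_i(\Sigma_i)$ are symmetric under $v\mapsto -v$ (so that the one-sided inequality $\xi(\mathcal{T})\geqslant 0$ collapses to $\xi|_{\mathcal{T}}=0$). This is where the smoothness of $\pi_i$ is essential: a $C^1$-curve $\gamma$ in $\Sigma_i$ can be reparametrised by $s\mapsto \gamma(-s)$, which reverses the velocity vector and shows closure of the tangent cone under $v\mapsto -v$. This is the only place where differentiability enters; the rest of the argument is pure linear algebra.
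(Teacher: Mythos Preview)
Your proof is correct and follows essentially the same route as the paper: reduce to a pointwise statement and invoke the duality identity $(W_1+W_2)^\perp = W_1^\perp\cap W_2^\perp$ (the paper phrases it in cone-polar language, $(A+B)^\circ = A^\circ\cap B^\circ$, but the content is identical). You are more explicit than the paper about the symmetry $-N_{\pi_i}=N_{\pi_i}$ and about writing out both implications separately, whereas the paper's proof is a one-line equation chain that tacitly relies on this symmetry and only spells out one direction.
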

Lemma \ref{transv2} 
obviously generalizes
to the case $\Sigma_i$ is 
a finite union of submanifolds,
in which case 
every submanifold in $\Sigma_1$ 
shall be transverse to any submanifold
of $\Sigma_2$.

\begin{proof}
To prove the lemma, we just work infinitesimally. We fix a pair $(p_1,p_2)\in \Sigma_1\times\Sigma_2$ such that $\pi_1(p_1)=\pi_2(p_2)=x$.  $\pi_1$ and $\pi_2$ are transverse at $x\in M$ implies by definition
that $\mathcal{T}_x\pi_1(\Sigma_1)+\mathcal{T}_x\pi_2(\Sigma_2)=T_xM$. Then 
by a classical result in the duality theory of cones,
$$\{0\}=\overline{T_xM}^\circ
=\overline{\mathcal{T}_x\pi_1(\Sigma_1)+\mathcal{T}_x\pi_2(\Sigma_2)}^\circ$$
$$=\mathcal{T}_x\pi_1(\Sigma_1)^\circ\cap\mathcal{T}_x\pi_2(\Sigma_2)^\circ=N_{\pi_1}\cap -N_{\pi_2}.$$
\end{proof} 
We illustrate the last lemma in the figure (\ref{Intersection}) for the case 
of two curves
intersecting 
transversally
in the plane
and we 
represent
the corresponding
spaces $N_{\pi_i}$.
\begin{figure} %on ouvre l'environnement figure
\begin{center}
\includegraphics[width=14cm]{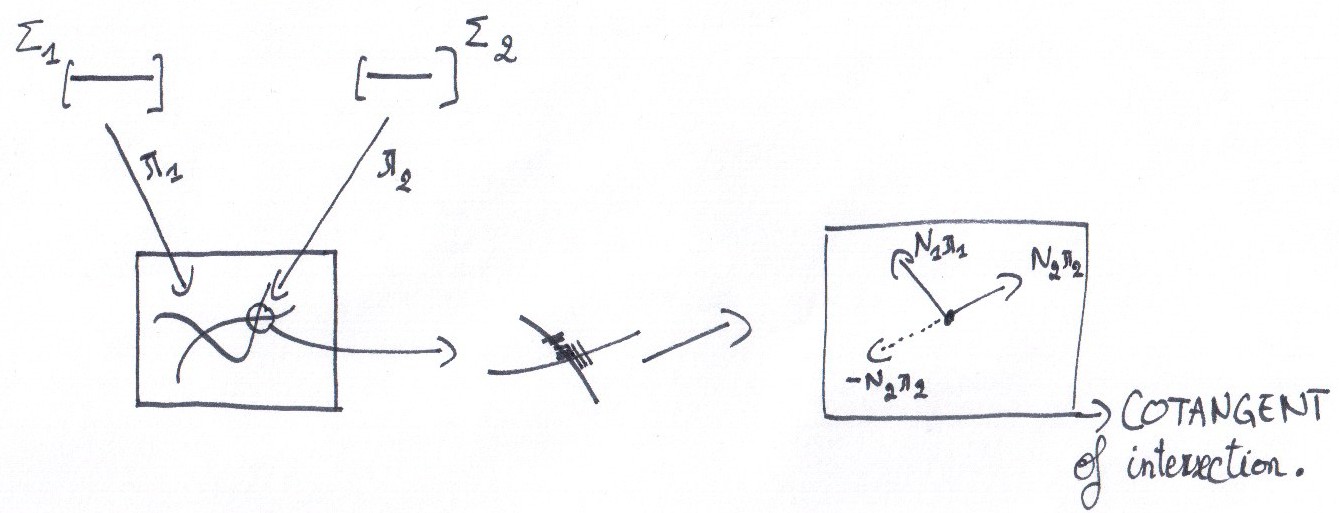} %ou image.png, .jpeg etc.
\caption{Transverse intersection of curves and their conormals.} %la légende
\label{Intersection} %l'étiquette pour faire référence à cette image
\end{center}
\end{figure} %on ferme l'environnement figure  
The meaning of this lemma 
is that
the condition
$N^\bullet_{\pi_1}\cap -N^\bullet_{\pi_2}=\emptyset$
of H\"ormander generalizes
the 
classical differential 
geometric 
transversality when
$\Sigma_i$
are not necessarily smooth submanifolds
in $B_i$.

\begin{prop}\label{suminv}
Let $\mathcal{S}_{i}=(\pi_i:B_i\mapsto M,S_i),i=(1,2)$ be 
a pair of Morse families over the manifold $M$.
If $\lambda_{\mathcal{S}_1}\left(\Sigma_{\mathcal{S}_1}\right)\bigcap (-\lambda_{\mathcal{S}_2}\left(\Sigma_{\mathcal{S}_2})\right)=\emptyset$, then $\left(\lambda_{\mathcal{S}_1}\left(\Sigma_{\mathcal{S}_1}\right)
+\lambda_{\mathcal{S}_2}\left(\Sigma_{\mathcal{S}_2}
\right)\right)\cup \lambda_{\mathcal{S}_1}\left(\Sigma_{\mathcal{S}_1}\right)\cup
\lambda_{\mathcal{S}_2}\left(\Sigma_{\mathcal{S}_2}\right)$ 
is parametrized by the Morse family $\mathcal{S}_{1} + \mathcal{S}_{2}=(\pi_1\overline{\times}_M\pi_2: B_1 \overline{\times}_M B_2\mapsto M, S_1+S_2)$.
\end{prop}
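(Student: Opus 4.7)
The plan is to verify the two content-bearing parts of the statement: first, that the triple $\mathcal{S}_1+\mathcal{S}_2=(\pi_1\overline{\times}_M\pi_2,S_1+S_2)$ really is a Morse family in the sense of the definition given earlier; second, that the Lagrangian-type set it generates is exactly the announced union. The cone structure and homogeneity of $S_1+S_2$ are immediate from the construction of $\overline{\times}_M$ and from the homogeneity of $S_1,S_2$ in their respective vertical coordinates (which do not interact), so the only substantive point in item one is the non-vanishing of the total differential.

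First I would write local coordinates $(x;\theta_1,\theta_2)$ on $B_1\overline{\times}_M B_2$, where $\theta_i$ is the fiber variable of $B_i$, with the convention that $\theta_i=0$ corresponds to the zero section $\underline{0}_i$. Since $S_i$ depends only on $(x,\theta_i)$, one has $d(S_1+S_2)=d_xS_1+d_xS_2+d_{\theta_1}S_1+d_{\theta_2}S_2$. Suppose for contradiction that $d(S_1+S_2)(x;\theta_1,\theta_2)=0$ at some point $p$ with $(\theta_1,\theta_2)\neq(0,0)$. If both $\theta_1,\theta_2$ are nonzero then $d_{\theta_i}S_i=0$, placing $(x;\theta_i)\in\Sigma_{\mathcal{S}_i}$, while $d_xS_1=-d_xS_2$ yields an element of $\lambda_{\mathcal{S}_1}(\Sigma_{\mathcal{S}_1})\cap-\lambda_{\mathcal{S}_2}(\Sigma_{\mathcal{S}_2})$, contradicting the hypothesis. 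If, say, $\theta_2=0$, then $S_2$ and its $\theta_2$-derivatives are killed (or, in the non-polynomial case, one works on the component $B_1\times_M\underline{0}_2$ where $S_2\equiv 0$), and $d(S_1+S_2)=dS_1$ which cannot vanish since $\mathcal{S}_1$ is itself a Morse family. This forces the H\"ormander condition to exactly characterise the Morse property of $\mathcal{S}_1+\mathcal{S}_2$.

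Next I would compute $\lambda_{\mathcal{S}_1+\mathcal{S}_2}(\Sigma_{\mathcal{S}_1+\mathcal{S}_2})$ by partitioning $B_1\overline{\times}_M B_2$ into the three pieces listed in the definition of $\overline{\times}_M$. On $B_1\times_M B_2$, the fiberwise critical equations decouple as $d_{\theta_1}S_1=0$ and $d_{\theta_2}S_2=0$, so $\Sigma_{\mathcal{S}_1+\mathcal{S}_2}$ there equals $\Sigma_{\mathcal{S}_1}\times_M\Sigma_{\mathcal{S}_2}$, and the Lagrange map sends a point to $(x;d_xS_1+d_xS_2)$, which by construction describes $\lambda_{\mathcal{S}_1}(\Sigma_{\mathcal{S}_1})+\lambda_{\mathcal{S}_2}(\Sigma_{\mathcal{S}_2})$. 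On $B_1\times_M\underline{0}_2$, the $S_2$-contribution disappears and one recovers $\Sigma_{\mathcal{S}_1}$ with Lagrange map $\lambda_{\mathcal{S}_1}$, giving the summand $\lambda_{\mathcal{S}_1}(\Sigma_{\mathcal{S}_1})$; the symmetric piece gives $\lambda_{\mathcal{S}_2}(\Sigma_{\mathcal{S}_2})$. Taking the union reproduces exactly the statement.

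The main obstacle will be the analytic behaviour of $S_i$ along the extra zero sections $\underline{0}_i$ added inside $B_1\overline{\times}_M B_2$: a general degree-one homogeneous smooth function on $\mathbb{R}^k\setminus\{0\}$ need not extend smoothly to the origin, so one must either restrict the class of Morse families to those whose generating functions extend (as in all the examples relevant to our recursion, where $S$ is linear in $\theta$), or else interpret $\Sigma_{\mathcal{S}_1+\mathcal{S}_2}$ on the boundary strata by taking limits along radial rays in $\theta_2\to 0$. Modulo this smoothness bookkeeping, which can be handled by working componentwise on the three strata of $\overline{\times}_M$, the identification of the Lagrangian union and the Morse property both follow from the coordinatewise decoupling of $S_1+S_2$ combined with the H\"ormander transversality hypothesis.
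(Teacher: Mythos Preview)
Your argument is correct and follows the same approach as the paper: work in local coordinates $(x;\theta_1,\theta_2)$, observe that the vertical critical equations decouple as $d_{\theta_1}S_1=0$, $d_{\theta_2}S_2=0$, identify $\lambda_{S_1+S_2}$ on each of the three pieces of $B_1\overline{\times}_M B_2$, and use the H\"ormander condition $\lambda_{\mathcal{S}_1}(\Sigma_{\mathcal{S}_1})\cap-\lambda_{\mathcal{S}_2}(\Sigma_{\mathcal{S}_2})=\emptyset$ to rule out $d(S_1+S_2)=0$ on the interior stratum $B_1\times_M B_2$. The paper treats the parametrization first and the non-vanishing of $d(S_1+S_2)$ afterwards, while you reverse the order, but the content is identical.

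One remark on your ``main obstacle'': the smoothness worry at $\underline{0}_i$ does not actually arise. By the definition of $\overline{\times}_M$ and the Morse-family axioms used here, $B_1\overline{\times}_M B_2$ is a \emph{disjoint union} of cones (the paper's remark on sums of Morse families counts $(n_1+1)(n_2+1)-1$ connected components), so $B_1\times_M\underline{0}_2$ is a separate component canonically identified with $B_1$, on which the generating function is simply $S_1$. No extension of $S_2$ across $\theta_2=0$ is needed; the componentwise treatment you propose is exactly the intended reading, not a workaround.
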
 
\begin{proof}
It is sufficient to find the Morse family parametrizing $\lambda_{\mathcal{S}_1}\left(\Sigma_{\mathcal{S}_1}\right)+\lambda_{\mathcal{S}_2}
\left(\Sigma_{\mathcal{S}_2}\right)$.
We will make some local computation in coordinates where we assume w.l.o.g. that
$B_i$ is
equal to the cartesian product $M\times \Theta_i$ with coordinates $(x,\theta_i)$ where $\Theta_i$ is a vector space 
with the origin removed.
Let us consider the Morse family $(\pi_1\times_M\pi_2: B_1 \times_M B_2\mapsto M , S_1+S_2)$, where we use the local coordinates $(x;\theta_1,\theta_2)$ for $B_1 \times_M B_2$. 
Then the critical set of this Morse family is by definition
$\{d_{\theta_1,\theta_2}(S_1+S_2)=0 \} =\{d_{\theta_1}S_1=0 \}\cap \{d_{\theta_2}S_2=0 \}=\Sigma_{\mathcal{S}_1}\times_M \Sigma_{\mathcal{S}_2}\subset B_1\times_M B_2$,
and the image of this subset by $\lambda_{S_1+S_2}$ is given by
$$\lambda_{S_1+S_2}\left(\Sigma_{\mathcal{S}_1}\times_M \Sigma_{\mathcal{S}_2} \right)=\{\left(x;d_{x}\left(S_1+S_2\right)\right)(x;\theta)| d_{\theta_1}S_1=0, d_{\theta_2}S_2=0 \}$$ $$=\{\left(x;d_{x}S_1+d_{x}S_2\right)| 
(x;\theta_1,\theta_2)\in \Sigma_{\mathcal{S}_1}\times_M \Sigma_{\mathcal{S}_2}\}
=\lambda_{\mathcal{S}_1}\Sigma_{\mathcal{S}_1}+
\lambda_{\mathcal{S}_2}\left(\Sigma_{\mathcal{S}_2}\right),$$
which proves $(\pi_1\times_M\pi_2: B_1 \times_M B_2\mapsto M , S_1+S_2)$ parametrizes $\lambda_{\mathcal{S}_1}\Sigma_{\mathcal{S}_1}+\lambda_{\mathcal{S}_2}\left(\Sigma_{\mathcal{S}_2}\right)$,
thus if we add all other components, $\lambda_{\mathcal{S}_1}\Sigma_{\mathcal{S}_1}+\lambda_{\mathcal{S}_2}\left(\Sigma_{\mathcal{S}_2}\right)\cup \lambda_{\mathcal{S}_1}\Sigma_{\mathcal{S}_1}\cup\lambda_{\mathcal{S}_2}\left(\Sigma_{\mathcal{S}_2}\right)$ is parametrized by the family $\mathcal{S}_{1} + \mathcal{S}_{2}=(\pi_1\overline{\times}_M\pi_2: B_1 \overline{\times}_M B_2\mapsto M, S_1+S_2)$.
 
 It remains to prove that $d(S_1+S_2)\neq 0$ in $B_1\times_M B_2$.
If both $d_{\theta_1}S_1(x;\theta_1)=0$ and $d_{\theta_2}S_2(x;\theta_2)=0$ then necessarily $d_x(S_1+S_2)(x;\theta_1,\theta_2)\neq 0$ since $\lambda_{\mathcal{S}_1}\left(\Sigma_{\mathcal{S}_1}\right)\bigcap -\lambda_{\mathcal{S}_2}\left(\Sigma_{\mathcal{S}_2}\right)=\emptyset$.
\end{proof}
%\begin{coro}
%Under the assumptions of the proposition (\ref{suminv}),
%for all distributions $(t_i)_{i=1,2}$ such that $WF(t_i)\subset \lambda_{\mathcal{S}_i}\Sigma_{\mathcal{S}_i},i=(1,2)$, the product
%$t_1t_2$ is a well defined distribution and $WF(t_1t_2)\subset \lambda_{S_1+S_2}\Sigma_{S_1+S_2}$.
%\end{coro}
For the moment our results and statements are for general Morse families
and we did not assume $\lambda_S\left(\Sigma_S\right)$ was Lagrangian (recall Lagrangian means $\lambda_S\left(\Sigma_S\right)=N_{\pi(\Sigma_S)}$ for us) nor that the critical set 
$\Sigma_S$ was a finite union of submanifolds. 
\begin{prop}\label{sumlag}
Under the assumptions of Proposition \ref{suminv}, 
if $\left(\lambda_{S_i}\left(\Sigma_{S_i}\right)\right)_{i=(1,2)}$ are 
\textbf{Lagrangians}
then
$\lambda_{\mathcal{S}_1+\mathcal{S}_2}\left(\Sigma_{\mathcal{S}_1+\mathcal{S}_2}\right)$
is Lagrangian.
\end{prop}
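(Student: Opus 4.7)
The plan is to combine the structural result of Proposition \ref{suminv} with the two transversality lemmas \ref{transv1} and \ref{transv2}, and then reduce the Lagrangian check to a duality computation for polar cones. Since the Lagrangian hypothesis means by definition that $\lambda_{S_i}(\Sigma_{S_i}) = N_{\pi_i(\Sigma_{S_i})}$ (fiberwise), and since we have already shown (proof of the preceding proposition on pull-backs) that for smooth submanifolds the tangent cones $\mathcal{T}_x\pi_i(\Sigma_{S_i})$ are vector subspaces with $N$ equal to their annihilator, the proof reduces to linear algebra once the relevant transversality is extracted.

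First I would use the hypothesis $\lambda_{S_1}(\Sigma_{S_1}) \cap (-\lambda_{S_2}(\Sigma_{S_2})) = \emptyset$ in conjunction with the Lagrangian assumption to rewrite it as $N^{\bullet}_{\pi_1(\Sigma_{S_1})} \cap -N^{\bullet}_{\pi_2(\Sigma_{S_2})} = \emptyset$. By Lemma \ref{transv2}, this is exactly transversality of $\pi_1$ and $\pi_2$ (here I am tacitly using that, being Lagrangian, each $\Sigma_{S_i}$ is a finite union of smooth submanifolds so that the lemma applies). Then Lemma \ref{transv1} yields that $\Sigma_{\mathcal{S}_1 + \mathcal{S}_2} = \Sigma_{S_1} \times_M \Sigma_{S_2}$ is again a finite union of smooth submanifolds, and by construction of the fiber product its projection to $M$ is exactly $\pi_1(\Sigma_{S_1}) \cap \pi_2(\Sigma_{S_2})$.

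Next I would compute both sides fiberwise at $x \in \pi(\Sigma_{\mathcal{S}_1 + \mathcal{S}_2})$. Transversality gives $\mathcal{T}_x(\pi_1(\Sigma_{S_1}) \cap \pi_2(\Sigma_{S_2})) = \mathcal{T}_x\pi_1(\Sigma_{S_1}) \cap \mathcal{T}_x\pi_2(\Sigma_{S_2})$; taking polars and using the classical identity $(V_1 \cap V_2)^\perp = V_1^\perp + V_2^\perp$ for transverse vector subspaces yields
\begin{equation*}
N_{\pi(\Sigma_{\mathcal{S}_1+\mathcal{S}_2}),x} = N_{\pi_1(\Sigma_{S_1}),x} + N_{\pi_2(\Sigma_{S_2}),x}.
\end{equation*}
On the other side, Proposition \ref{suminv} computes
\begin{equation*}
\lambda_{\mathcal{S}_1+\mathcal{S}_2}(\Sigma_{\mathcal{S}_1+\mathcal{S}_2}) = \bigl(\lambda_{S_1}\Sigma_{S_1} + \lambda_{S_2}\Sigma_{S_2}\bigr) \cup \lambda_{S_1}\Sigma_{S_1} \cup \lambda_{S_2}\Sigma_{S_2},
\end{equation*}
and after substituting the Lagrangian identities $\lambda_{S_i}\Sigma_{S_i} = N_{\pi_i(\Sigma_{S_i})}$ the right hand side matches the vector-space sum above once the zero section is properly tracked (the three pieces correspond exactly to the summands with both, one, or the other of the two components nonzero). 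Comparing the two expressions gives $\lambda_{\mathcal{S}_1+\mathcal{S}_2}(\Sigma_{\mathcal{S}_1+\mathcal{S}_2}) = N_{\pi(\Sigma_{\mathcal{S}_1+\mathcal{S}_2})}$, which is exactly the Lagrangian condition.

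The main obstacle I anticipate is the bookkeeping at the boundary of the smooth locus: $\pi_i(\Sigma_{S_i})$ need not be a smooth embedded submanifold everywhere (the $\pi_i$ are not assumed injective), so I have to justify that $\mathcal{T}_x \pi_i(\Sigma_{S_i})$ is a genuine vector subspace at every $x$ in the intersection, and that the polar/annihilator duality used above really does hold on the (possibly nowhere-dense) singular strata. The cleanest way to finesse this is to verify the equality of cones on the open dense subset where both $\pi_i$ are immersions of rank equal to $\dim \Sigma_{S_i}$—there the argument above is exact—and then use that both sides are closed conic subsets of $T^\star M$ together with the closure statement implicit in the paper's definition of ``Lagrangian at every $x$'' (or on an open dense subset) to conclude.
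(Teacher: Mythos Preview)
Your approach is essentially the same as the paper's: compute the tangent cone of $\pi(\Sigma_{\mathcal{S}_1+\mathcal{S}_2})$ as the intersection $\mathcal{T}\pi_1(\Sigma_{S_1})\cap\mathcal{T}\pi_2(\Sigma_{S_2})$, take polars, and compare with the description of $\lambda_{\mathcal{S}_1+\mathcal{S}_2}(\Sigma_{\mathcal{S}_1+\mathcal{S}_2})$ from Proposition~\ref{suminv}.

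One point of divergence worth noting. You route the polar computation through transversality (Lemmas~\ref{transv1},~\ref{transv2}) and the subspace identity $(V_1\cap V_2)^\perp=V_1^\perp+V_2^\perp$, which forces you to assume that the $\mathcal{T}_x\pi_i(\Sigma_{S_i})$ are honest vector subspaces; you then patch this by restricting to an open dense set. The paper instead uses the general convex-cone duality $(C_1\cap C_2)^\circ=\overline{C_1^\circ+C_2^\circ}$ directly, with no smoothness or transversality needed at this stage, and observes that the closure $\overline{\lambda_{S_1}\Sigma_{S_1}+\lambda_{S_2}\Sigma_{S_2}}$ is exactly the three-piece union appearing in Proposition~\ref{suminv}. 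This is slightly cleaner and avoids the regularity bookkeeping you flag in your last paragraph. Also, your parenthetical ``being Lagrangian, each $\Sigma_{S_i}$ is a finite union of smooth submanifolds'' is not part of the paper's definition of Lagrangian---the smoothness of the critical sets is a \emph{separate} hypothesis, handled in the proposition immediately following~\ref{sumlag}. Your open-dense workaround is a legitimate fix, but the paper's route simply doesn't need it here.
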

\begin{proof}
One can check from the definitions that
$\mathcal{T}((\pi_1\times_M\pi_2)(\Sigma_1\times_M\Sigma_2))=\mathcal{T}(\pi_1\Sigma_1)\cap \mathcal{T}(\pi_2\Sigma_2)$. Hence by linear algebra,
$$N_{(\pi_1\times_M\pi_2)(\Sigma_1\times_M\Sigma_2)}= \mathcal{T}((\pi_1\times_M\pi_2)(\Sigma_1\times_M\Sigma_2))^\circ=(\mathcal{T}(\pi_1\Sigma_1)\cap \mathcal{T}(\pi_2\Sigma_2))^\circ$$ 
$$=\overline{(\mathcal{T}(\pi_1\Sigma_1))^\circ + (\mathcal{T}(\pi_2\Sigma_2))^\circ}
=\overline{N_{\pi_1(\Sigma_1)} + N_{\pi_2(\Sigma_2)}}
=\overline{\lambda_{\mathcal{S}_1}\left(\Sigma_{\mathcal{S}_1}\right)
+\lambda_{\mathcal{S}_2}\left(\Sigma_{\mathcal{S}_2}\right)},$$ 
finally $N_{(\pi_1\times_M\pi_2)(\Sigma_1\times_M\Sigma_2)}=
\overline{\lambda_{\mathcal{S}_1}\left(\Sigma_{\mathcal{S}_1}\right)
+\lambda_{\mathcal{S}_2}\left(\Sigma_{\mathcal{S}_2}\right)}$
means that 
$$N_{(\pi_1\times_M\pi_2)(\Sigma_1\overline{\times_M}\Sigma_2)}$$
$$=N_{(\pi_1\times_M\pi_2)(\Sigma_1\times_M\Sigma_2)}\cup N_{(\pi_1\times_M\pi_2)(\Sigma_1\times_M\underline{0}_2)}\cup N_{(\pi_1\times_M\pi_2)(\underline{0}_1\times_M\Sigma_2)}
$$
$$=\overline{\lambda_{\mathcal{S}_1}\left(\Sigma_{\mathcal{S}_1}\right)
+\lambda_{\mathcal{S}_2}\left(\Sigma_{\mathcal{S}_2}\right)}
\cup 
\lambda_{\mathcal{S}_1}\left(\Sigma_{\mathcal{S}_1}\right)
\cup
\lambda_{\mathcal{S}_2}\left(\Sigma_{\mathcal{S}_2}\right) $$
$$ 
=\lambda_{\mathcal{S}_1}\left(\Sigma_{\mathcal{S}_1}\right)
+\lambda_{\mathcal{S}_2}\left(\Sigma_{\mathcal{S}_2}\right)
\cup 
\lambda_{\mathcal{S}_1}\left(\Sigma_{\mathcal{S}_1}\right)
\cup
\lambda_{\mathcal{S}_2}\left(\Sigma_{\mathcal{S}_2}\right)$$
$$
=\lambda_{\mathcal{S}_1+\mathcal{S}_2}\left(\Sigma_{\mathcal{S}_1+\mathcal{S}_2}\right),
$$ 
which by definition 
means 
$\lambda_{\mathcal{S}_1+\mathcal{S}_2}\left(\Sigma_{\mathcal{S}_1+\mathcal{S}_2}\right)$
is Lagrangian.  
\end{proof}
\begin{prop}
If under the assumptions of Proposition (\ref{sumlag}), 
each $\Sigma_{S_i}$
is a finite union 
of smooth submanifolds in $B_i$ 
then $\Sigma_{\mathcal{S}_1}\times_M \Sigma_{\mathcal{S}_2}$ 
is a finite union of smooth submanifolds 
of $B_1\times_M B_2$.
\end{prop}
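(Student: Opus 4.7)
The plan is to decompose the fiber product along the given decompositions of $\Sigma_{S_1}$ and $\Sigma_{S_2}$ into smooth submanifolds and show that each piece of the decomposition is a transverse fiber product in the sense of Lemma \ref{transv1}, where transversality follows from the H\"ormander type disjointness assumption $\lambda_{\mathcal{S}_1}(\Sigma_{\mathcal{S}_1})\cap-\lambda_{\mathcal{S}_2}(\Sigma_{\mathcal{S}_2})=\emptyset$ inherited from Proposition \ref{suminv}.

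First, I would write $\Sigma_{S_i}=\bigcup_{\alpha_i\in A_i}\Sigma_{S_i}^{\alpha_i}$ as a finite union of smooth submanifolds of $B_i$, so that
\[
\Sigma_{\mathcal{S}_1}\times_M\Sigma_{\mathcal{S}_2}=\bigcup_{(\alpha_1,\alpha_2)\in A_1\times A_2}\Sigma_{S_1}^{\alpha_1}\times_M\Sigma_{S_2}^{\alpha_2}.
\]
Since $A_1\times A_2$ is finite, it suffices to show that each $\Sigma_{S_1}^{\alpha_1}\times_M\Sigma_{S_2}^{\alpha_2}$ is a smooth submanifold of $B_1\times_M B_2$. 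By the (obvious) extension of Lemma \ref{transv1} to pairs of smooth submanifolds, this reduces to checking that the restrictions $\pi_1|_{\Sigma_{S_1}^{\alpha_1}}$ and $\pi_2|_{\Sigma_{S_2}^{\alpha_2}}$ are transverse in the sense of differential geometry, at every point of $\pi_1(\Sigma_{S_1}^{\alpha_1})\cap\pi_2(\Sigma_{S_2}^{\alpha_2})$.

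To establish this transversality, I would appeal to Lemma \ref{transv2}, which reformulates it as the condition $N^\bullet_{\pi_1(\Sigma_{S_1}^{\alpha_1})}\cap-N^\bullet_{\pi_2(\Sigma_{S_2}^{\alpha_2})}=\emptyset$. Here the Lagrangian hypothesis on each $\lambda_{S_i}(\Sigma_{S_i})$ enters crucially: since $\lambda_{S_i}(\Sigma_{S_i})=N_{\pi_i(\Sigma_{S_i})}$ (outside the zero section), and since the Lagrangian property respects the stratification, one gets $\lambda_{S_i}(\Sigma_{S_i}^{\alpha_i})=N_{\pi_i(\Sigma_{S_i}^{\alpha_i})}$ on each stratum. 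Then the assumption $\lambda_{\mathcal{S}_1}(\Sigma_{\mathcal{S}_1})\cap-\lambda_{\mathcal{S}_2}(\Sigma_{\mathcal{S}_2})=\emptyset$ inherited from Proposition \ref{suminv} implies a fortiori
\[
N^\bullet_{\pi_1(\Sigma_{S_1}^{\alpha_1})}\cap -N^\bullet_{\pi_2(\Sigma_{S_2}^{\alpha_2})}=\emptyset
\]
for every pair $(\alpha_1,\alpha_2)$, which is exactly transversality.

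The main obstacle I expect is the subtle point of transferring the Lagrangian identification $\lambda_{S_i}(\Sigma_{S_i})=N_{\pi_i(\Sigma_{S_i})}$ from the full (possibly singular) critical set to each individual smooth stratum $\Sigma_{S_i}^{\alpha_i}$. On a stratum where $\Sigma_{S_i}^{\alpha_i}$ is smooth and $\pi_i|_{\Sigma_{S_i}^{\alpha_i}}$ has locally constant rank, the tangent cone $\mathcal{T}\pi_i(\Sigma_{S_i}^{\alpha_i})$ genuinely coincides with the differential geometric image $d\pi_i(T\Sigma_{S_i}^{\alpha_i})$, so its polar cone is a linear subspace and the identification goes through; but at boundary points where several strata meet one must argue that the tangent cone to the union is contained in the union of stratumwise tangent cones, and that polar duality reverses this inclusion in the convenient direction. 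Once this is handled, the conclusion follows by taking the finite union and invoking the extended form of Lemma \ref{transv1}.
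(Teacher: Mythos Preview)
Your proposal is correct and follows essentially the same route as the paper. The paper's proof is much terser: rather than decomposing into strata first and worrying about the stratumwise Lagrangian identification, it works globally, observing that the Lagrangian hypothesis gives $\lambda_{S_i}(\Sigma_{S_i})=N_{\pi_i(\Sigma_{S_i})}$ directly, so the H\"ormander condition becomes $N^\bullet_{\pi_1(\Sigma_{S_1})}\cap-N^\bullet_{\pi_2(\Sigma_{S_2})}=\emptyset$; it then invokes the finite-union extension of Lemma~\ref{transv2} (stated just after that lemma, asserting that global normal-cone disjointness forces every pair of strata to be transverse) followed by the finite-union extension of Lemma~\ref{transv1}. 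The subtle point you flag about transferring the Lagrangian identification to each individual stratum is thus sidestepped in the paper by packaging it into the extended Lemma~\ref{transv2} rather than arguing it directly; the substance of the two arguments is the same.
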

\begin{proof}
It suffices to recognize that the assumption 
$\lambda_{\mathcal{S}_1}\left(\Sigma_{\mathcal{S}_1}\right)
\bigcap -\lambda_{\mathcal{S}_2}\left(\Sigma_{\mathcal{S}_2}\right)=\emptyset$ 
is equivalent to 
$N^\bullet_{\pi_1(\Sigma_{S_1})}
\cap -N^\bullet_{\pi_2(\Sigma_{S_2})}=\emptyset$ 
(by our definition of being Lagrangian) 
which implies the transversality of the two maps 
$\pi_1:\Sigma_{\mathcal{S}_1}\mapsto M$, 
$\pi_2:\Sigma_{\mathcal{S}_2}\mapsto M$ 
by lemma (\ref{transv2}),
which means by application of lemma (\ref{transv1}) that the fiber product 
$\Sigma_{\mathcal{S}_1}\times_M \Sigma_{\mathcal{S}_2}$ 
is a finite union of smooth submanifolds of $B_1\times_M B_2$.
\end{proof}

 To summarize all the results we proved if $t_1$ and $t_2$ are distributions whith wave front set
$WF(t_i)$ parametrized by the Morse family $\mathcal{S}_i$ and
$(\lambda_{S_i}\left(\Sigma_{S_i}\right))_{i=(1,2)}$ satisfy the H\"ormander condition $\lambda_{S_1}\left(\Sigma_{S_1}\right)\cap -\lambda_{S_2}\left(\Sigma_{S_2}\right)=\emptyset$ then the distributional product $t_1t_2$ makes sense
and has wave front set contained in the set $\lambda_{\mathcal{S}_1+\mathcal{S}_2}\left(\Sigma_{\mathcal{S}_1+\mathcal{S}_2}\right)$ parametrized by the Morse family $\mathcal{S}_1+\mathcal{S}_2$. 
Furthermore, we proved 
that 
if $(\lambda_{S_i}\Sigma_{S_i})_{i=(1,2)}$ are Lagrangians
and $(\Sigma_{S_i})_{i=(1,2)}$ 
are finite union 
of smooth submanifolds then
the same properties hold 
for the Morse family $\mathcal{S}_1+\mathcal{S}_2$.
If $f:N\mapsto M$ is a smooth submersion and $t\in \mathcal{D}^\prime(M)$ whith wave front set
$WF(t)$ parametrized by the Morse family $\mathcal{S}$ then the pull-back $f^\star t$ makes sense
and has wave front set contained in the set $\lambda_{f^\star\mathcal{S}}\Sigma_{f^\star\mathcal{S}}$ parametrized by the Morse family $f^\star\mathcal{S}$. Furthermore, we proved 
that if $\lambda_{S}\Sigma_{S}$ is Lagrangian
and $\Sigma_{S}$ is a finite union of 
smooth submanifolds then
the same properties hold 
for the Morse family $f^\star\mathcal{S}$.
\begin{thm}
Let $\overline{t}_n$ be the distributions defined by the recursion theorem.
Then $WF(\overline{t}_n)$ is parametrized by a Morse family and is 
a union of smooth Lagrangian manifolds.
\end{thm}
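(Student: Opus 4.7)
The plan is to proceed by induction on $n$, using the functorial calculus of Morse families developed in this chapter (Proposition~\ref{pull-back} for pull-backs and Propositions~\ref{suminv}, \ref{sumlag} for sums) together with the main recursion Theorem~\ref{lasttheorem}.

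For the base case, I would first show that $WF(\Delta_+)$ is parametrized by a Morse family and is Lagrangian. This uses the oscillatory integral representation from Chapter~5: pulling back the phase $\phi_\pm(t,x;\xi) = \pm t|\xi| - x\cdot\xi$ by the geometric map $F$ of Section~5.3 yields a Morse triple $((\mathbb{R}^n\setminus\{0\})\times \mathcal{V}\to \mathcal{V}, \phi_\pm\circ F)$ whose critical manifold projects to the conoid $\{\Gamma=0\}$ and whose Lagrange immersion coincides with the conormal computed in Proposition~\ref{oscilldefi}. The powers $\Delta_+^m$ are handled by taking $m$-fold fiber products of this Morse family with itself (using Proposition~\ref{suminv}), which is legitimate because the strict polarization proved in Chapter~6 guarantees that $\lambda_{S_1}(\Sigma_{S_1})\cap(-\lambda_{S_2}(\Sigma_{S_2}))=\emptyset$ at each step. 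Finally $t_2 = \chi_1\Delta_+^m(x_1,x_2)+\chi_2\Delta_+^m(x_2,x_1)$ is a finite union of such pieces, extended across $d_2$; the extension adds only the conormal of $d_2$ by Theorem~\ref{mainthm22}, and this conormal is itself parametrized by the fundamental Morse family example $(f_1,\dots,f_d)\mapsto \sum\theta^i f_i$ attached to the defining functions of $d_2$.

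For the inductive step, assume that for every $I\varsubsetneq[n]$ the wave front set $WF(\overline{t}_I)$ is parametrized by a Morse family $\mathcal{S}_I$ with $\lambda_{\mathcal{S}_I}(\Sigma_{\mathcal{S}_I})$ Lagrangian and $\Sigma_{\mathcal{S}_I}$ a finite union of smooth submanifolds. On each open stratum $\tilde{M}_{I,I^c}$, the recursion gives
\[
\overline{t}_n = \sum_{I,(m_{ij})} \chi_I\,(p^\star_{[n]\mapsto I}\overline{t}_I)\,(p^\star_{[n]\mapsto I^c}\overline{t}_{I^c})\,\prod_{(i,j)\in I\times I^c}\Delta_+^{m_{ij}}(x_i,x_j).
\]
I pull back each $\mathcal{S}_I$, $\mathcal{S}_{I^c}$ and the Morse family for $\Delta_+^{m_{ij}}$ along the canonical projections (Proposition~\ref{pull-back} preserves both the Lagrangian property and the stratified smoothness), then form their sum as Morse families (Propositions~\ref{suminv} and \ref{sumlag}). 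The H\"ormander hypothesis required to form these sums is exactly the polarization statement of Theorem~\ref{goodthm1} established in Chapter~6: the wave front set of each $t_I$ is polarized, and each Laplace coupling is strictly polarized, so by the product theorem~\ref{lemm2} no opposite covectors occur. The smooth cut-off $\chi_I$ does not affect wave front sets, so the open-set restriction of $WF(\overline{t}_n)$ is parametrized by a Morse family with Lagrangian $\lambda_\mathcal{S}(\Sigma_\mathcal{S})$.

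The last step is the extension across the thin diagonal $d_n$. By the microlocal extension theorem (Theorem~\ref{mainthm22} and Theorem~\ref{thmfin2}), the wave front set picks up at most the conormal $(Td_n)^\perp$, and this conormal is parametrized by the fundamental example Morse family attached to any system of local defining functions for $d_n$. Adding this Morse family to the one already constructed on $M^n\setminus d_n$ (again via the sum operation, legitimate because the polarization of $WF(\overline{t}_n|_{M^n\setminus d_n})$ together with the explicit form of $(Td_n)^\perp$ avoid antipodal pairs) yields a single Morse family parametrizing $WF(\overline{t}_n)$ on all of $M^n$, with Lagrangian image and with critical locus a finite union of smooth submanifolds. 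I expect the main obstacle to be the careful verification that the extension step truly only adds the conormal of $d_n$ in a way compatible with the Morse family structure: one must check that the limiting covectors arising from the scaling procedure of Chapter~3 can be exhibited as images under $\lambda_\mathcal{S}$ of critical points of the summed Morse family, which ultimately rests on the fact that the Euler scaling used in the extension preserves the homogeneity structure defining a Morse family.
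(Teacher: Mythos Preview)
Your approach is correct and follows essentially the same inductive scheme as the paper: verify the base case for $t_2$ and for the conormals $(Td_I)^\perp$, then propagate through the recursion using the functorial behaviour of Morse families under pull-back and sum, with the H\"ormander hypothesis furnished by polarization.

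Two minor remarks on differences of execution. For the base case the paper does not pull back the oscillatory phase $\phi_\pm\circ F$; it works directly on $U^2\setminus d_2$ with the one-parameter family
\[
\mathcal{S}=\bigl(\mathbb{R}_{>0}\times (U^2\setminus d_2)\to U^2\setminus d_2,\; \theta\,\Gamma(x,y)\bigr),
\]
whose critical set is the smooth conoid $\{\Gamma=0\}$ and whose Lagrange image is exactly the half-conormal selected by causality. This is simpler than your $n$-variable phase and makes the smoothness of $\Sigma_S$ and the Lagrangian property immediate. Your choice also works, but is heavier than needed. For the extension step, the passage from $WF(t_n|_{M^n\setminus d_n})$ to $WF(\overline{t}_n)$ adds the conormal $(Td_n)^\perp$ as a \emph{union}, not via the sum operation $\mathcal{S}_1+\mathcal{S}_2$; since the definition of Morse family allows $B$ to be disconnected, one simply adjoins the conormal Morse family as a new connected component, and no antipodal check is required at that point.
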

\begin{proof}
We use the notation and formalism of the section 3 in Chapter 5. To inject this condition in our recursion theorem, it will be sufficient to check that $WF(\Delta_+)|_{C_i},i\in\{1,2\}$ or equivalently $WF t_2(\phi(x)\phi(y))|_{U^2\setminus d_2}$ and all conormal bundles 
$(Td_I)^\perp$ are parametrized by Morse families. 
For $t_2(\phi(x)\phi(y))$, by Theorem \ref{Wavefrontpullback} of Chapter $5$ 
and causality, 
$$WF(t_2(\phi(x)\phi(y)))=WF(\Delta_+(x,y))|_{x\geqslant y}\cup WF(\Delta_+(y,x))|_{y\geqslant x}  $$
$$=\text{conormal }\{\Gamma=0\}\cap \{(x,y;\xi,\eta) | (x^0-y^0)\eta^0>0   \} .$$
Thus we can write the Morse family in a local chart $U^2\setminus d_2$: $$\mathcal{S}=\left(\mathbb{R}_{>0}\times (U^2\setminus d_2)\mapsto (U^2\setminus d_2), \theta \Gamma(x,y) \right)$$ 
and the fact that it parametrizes $WF(t_2)$ results 
from the fact that:
$$\{(x,y;\theta d_x\Gamma,\theta d_y\Gamma | \Gamma(x,y)=0, \theta>0 \}=\text{conormal }\{\Gamma=0\}\cap \{(x,y;\xi,\eta) | (x^0-y^0)\eta^0>0   \}.$$ 
Furthermore the critical set $$\Sigma_S=\{(x,y)\in U^2\setminus d_2| \Gamma(x,y)=0 \}$$ is a \textbf{smooth submanifold} and $\lambda_S\left(\Sigma_S\right)\subset T^\bullet(U^2\setminus d_2)$ is \textbf{Lagrangian}.
Also for the conormal of the diagonals, it was already treated in our examples, they can always be generated by Morse families. Then we inject 
these hypotheses in the recursion and we easily get the result.
\end{proof}
\begin{ex}
In order to illustrate
the mechanism
at work, we choose to study
the example
of the wave front set of
the product 
$$\delta_{x^1=0}\delta_{x^2=0}\delta_{x^3=0}(x^1,x^2,x^3)$$
of three delta functions
$\delta_{x^i=0},i=(1,2,3)$ in $\mathbb{R}^3$.
Each $\delta_{x^i=0}$ is 
supported on 
the hyperplane $x^i=0$. 
One should have in mind 
the boundary
of a cube in a small 
neighborhood of
one vertex !
Each $\delta_{x^i=0}$ 
has wave front set
equal to the conormal bundle 
of the corresponding face 
$x^i=0$
of a cube, 
which is parametrized
by the Morse 
family
$$\mathcal{S}_i=\left((\theta_i;x)\in\left(\mathbb{R}\setminus\{0\}\right)\times\mathbb{R}^3
\mapsto x\in\mathbb{R}^3, S_i(x,\theta_i)= x^i\theta_i \right).$$
We represented in the figure
some vectors $\nabla_xS_i$ 
standing for
the momentum component of the
conormal of the face $x^i=0$.
When two faces 
$F_i,F_j$ are adjacent to an edge
$F_i\cap F_j$, 
the convex sum of 
the wave front sets
supported over the edge
is the conormal of the edge 
(represented in the figure as a \emph{tube})
which is parametrized
by the Morse family
$$\left((\theta_i,\theta_j;x)\in\left(\mathbb{R}\setminus\{0\}\right)^{2}\times\mathbb{R}^3\mapsto x\in\mathbb{R}^3,\left(S_i+S_j\right)(x,\theta_i,\theta_j)=x^i\theta_i+x^j\theta_j\right).$$
Finally the origin
is a vertex adjacent to all faces
and the wave front set
over $(0,0,0)$ is parametrised by
$$\left((\theta_1,\theta_2,\theta_3;x)
\in\left(\mathbb{R}\setminus\{0\}\right)^{3}\times\mathbb{R}^3
\mapsto x\in\mathbb{R}^3,
\left(S_1+S_2+S_3\right)=x^1\theta_1+x^2\theta_2+x^3\theta_3\right),$$
and represents the conormal
at the origin 
(represented in 
the figure as 
the sphere).
In total,
the Wavefronset
has seven smooth components
indexed by the strata of the 
cube boundary: ($3$ faces, $3$ edges, $1$ vertex).
The reader can check
that the wave front set 
of $\delta_{x_1=0}\delta_{x_2=0}\delta_{x_3=0}(x_1,x_2,x_3)$
is parametrized
by the Morse family
$\mathcal{S}_1+\mathcal{S}_2+\mathcal{S}_3$ 
(all seven cases are covered since by definition the sum of Morse families ``contains zero sections'')
which is equal to
$$\{\pi:\left(\mathbb{R}^3\setminus \{0,0,0\}\right)\times\mathbb{R}^3\mapsto \mathbb{R}^3, S(x;\theta)=x^1\theta_1+x^2\theta_2+x^3\theta_3 \}.$$
The morality of this example
is that
the conormal
of a union
of manifolds
\textbf{is not the union
of the conormals}!
One should take into 
account the
informations contained
in the ``strata'' and our
formalism does it for the
most elementary
example.
\end{ex}
\begin{figure} %on ouvre l'environnement figure
\begin{center}
\includegraphics[width=16cm,height=12cm]{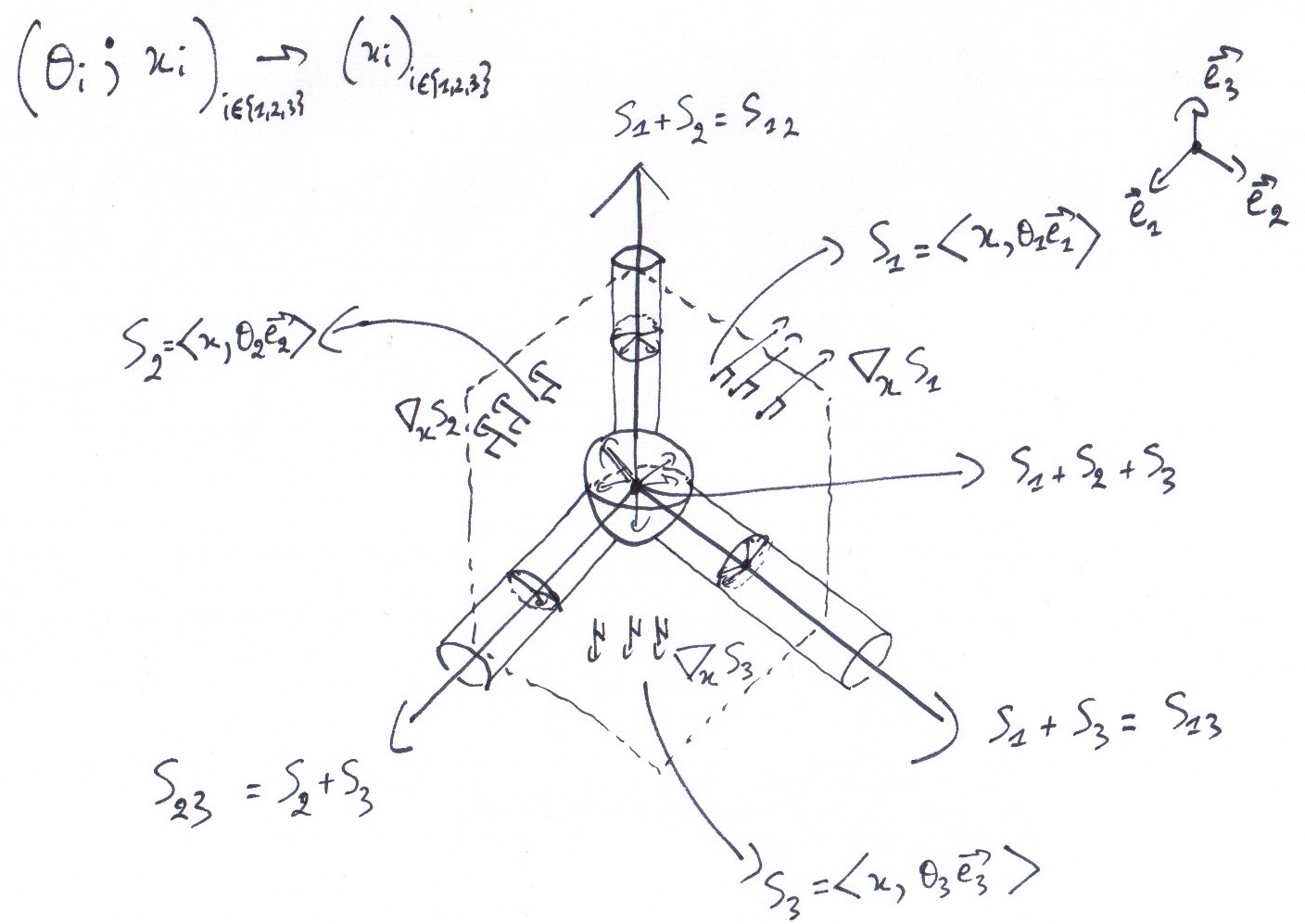} %ou image.png, .jpeg etc.
\caption{The wave front set of $\delta_{x_1=0}\delta_{x_2=0}\delta_{x_3=0}$ as a union of $7$ Lagrange immersions.} %la légende
\label{Cube} %l'étiquette pour faire référence à cette image
\end{center}
\end{figure} %on ferme l'environnement figure  
\section{A conjectural formula.}

We conjecture a formula
which should give
an upper bound
of the
wave front set of
any Feynman amplitude
corresponding 
to a Feynman 
diagram $\Gamma$.

Let $\Gamma$ be a 
graph with $n$ vertices 
which are indexed
by $[n]$.
Let $E(\Gamma)$
denote the set of
edges of $\Gamma$,
to each element
$e\in E(\Gamma)$
corresponds a
unique injective map
$e:\{1,2\}\mapsto [n]$
s.t.
the edge $e$
connects 
the vertices 
$e(1)$ and $e(2)$.
To $\Gamma$,
we associate
the Morse
family
\begin{eqnarray}
\left(\pi:\left(\mathbb{R}_{\geqslant 0}^{E(\gamma)}\times\left(\mathbb{R}^d\right)^{\frac{n(n-1)}{2}}\times U^n\right)\setminus \left(\underline{0}\cup \{dS=0\}\right)\mapsto U^n, \, 
S \right)\\
\pi:(\tau_e)_e,(\theta_{ij})_{ij},(x_1,\cdots,x_n)\mapsto (x_1,\cdots,x_n) \\
S=\sum_{e\in E(\Gamma)} \tau_e\Gamma(x_{e(1)},x_{e(2)})+\sum_{1\leqslant i<j\leqslant n}\theta_{ij}.(x_i-x_j).
\end{eqnarray}
We conjecture
that this Morse family
parametrizes
the wave front set
of the 
Feynman amplitude
corresponding
to $\Gamma$.\\
We also
conjecture that 
the wave front set
of all
$n$-point functions
$t_n$ are contained
in the set parametrized
by the Morse family:
\begin{eqnarray}
\left(\pi:\left(\mathbb{R}^{\frac{n(n-1)}{2}}_{\geqslant 0}\times\left(\mathbb{R}^d\right)^{\frac{n(n-1)}{2}}\times U^n\right)\setminus \left(\underline{0}\cup \{dS=0\}\right)\mapsto U^n, \, 
S \right)\\
\pi:(\tau_{ij}),(\theta_{ij})_{ij},(x_1,\cdots,x_n)\mapsto (x_1,\cdots,x_n) \\
S=\sum_{1\leqslant i<j\leqslant n} \tau_{ij}\Gamma(x_{i},x_{j})+\theta_{ij}.(x_i-x_j).
\end{eqnarray}

\chapter{Anomalies and residues.}
\section{Introduction.}
\paragraph{The plan of the chapter.}

 First, we will generalize the notion of weak homogeneity of Yves Meyer \cite{Meyer} to the setting of currents, then show how the results of Chapter $1$ naturally transfer to this new setting.
However, we need to discuss the notion of Taylor expansion for test forms to give a suitable 
meaning to the notions of Taylor polynomial and Taylor remainder of a test form.
We spend some 
time to discuss the notion of currents supported on a submanifold $I$ and their \emph{representation} in the current theoretic setting. 
Following physics terminology, we will call local counterterms the currents supported on $I$:
actually in the causal approach to QFT, 
all ambiguities of the renormalization schemes 
can be described by \textbf{local counterterms}, 
more precisely the difference between two renormalizations
is a current supported on $I$.

 One natural example of ambiguity originates from the work of Yves Meyer \cite{Meyer}. We call $R$ the composite operation of restriction of a distribution defined on $M$ to $M\setminus I$ followed by any extension operation. We explain why this operation differs from the identity because of the non-uniqueness of the extension procedure.
We describe explicitely the ambiguity of this operation $R$ by giving an explicit formula for $T-RT$ and we show that this difference is a \textbf{local counterterm}. We give an interpretation of this ambiguity in terms of the notion of ``generalized moment'' for currents.

 Then we will describe the dependance of the regularization operator $R$ defined in Chapter $1$, that might be called the \emph{Hadamard regularization operator}, on the choice of bump function $\chi$ (which is equal to $1$ in a neighborhood of $I$) and the choice of Euler vector field $\rho$. 
Without surprise, we will prove that a change in the function $\chi$ or the vector field $\rho$ will result in a change of $R$ by a \textbf{local counterterm}, these are explicit ambiguities.
 In QFT, a fundamental
question is to ask if the symmetries or the exactness of currents can be preserved by the renormalization scheme. However since all continuous symmetries of QFT can be encoded by Lie algebras of vector fields it is natural to wonder if the Lie derivatives commute with the renormalization $R$. 
The symmetry is not always preserved and the quantity which measures this defect will be called \textbf{residue} of $T$. 
In the following, $\mathfrak{Res}$ is defined by generalizing Griffiths--Harris's definition
(\cite{Griffiths} p.~368)  
by the chain homotopy equation  
\begin{equation} 
dR T - Rd T=\mathfrak{Res}[T] 
\end{equation}
and is a \textbf{local counterterm}. 
However $\mathfrak{Res}$ is a special type of counterterm since $\mathfrak{Res}$
is always closed in $\mathcal{D}^\prime(M)$ and is exact when $T$ is closed.
We show that the 
regularization techniques 
of Meyer allows us to extend the notion of residues in the sense of Griffiths--Harris 
(see the section $3$ in \cite{Griffiths}) 
and our resulting 
definition has nothing to do with complex analysis.  
The residue in \cite{Griffiths} is only well defined for \emph{functions} $T\in L_{loc}^q\left(\mathbb{R}^n \right)$ (\cite{Griffiths} p.~369) smooth outside a given singular set $S$, whereas our notion of residue works for distributions in $E_s$ which are weakly homogeneous of degree $s$ for arbitrary $s$. Somehow, our regularity hypothesis on the current $T$ which guarantees the existence of residues is minimal because any current defined \textbf{globally} on $M$ will live in some scale space $E_s$ for some $s$.
%But we want to emphasize the fact that our residue formalism is \textbf{necessary} in QFT since the Feynman amplitude $(\Delta_+(x,y))^{10000} $ (where $\Delta_+$ is the Schwartz kernel of a Hadamard state) does not belong to any function space, for instance it does not belong to any Sobolev space $W^{s,p}$ with positive $s,1\leqslant p\leqslant \infty$. It is a genuine distribution.
%Representing $Res[T]$ as a current supported on $I$,
%we will also prove the residue map sends exact currents to exact currents and thus induces a map from the cohomology of currents in $M\setminus I$ to the cohomology of currents in $I$.   
The residue theory provides a very flexible and general framework to study anomalies.  
We repeat the construction of geometric residues for infinite dimensional Lie algebras of symmetries, for $X$ a vector field which commutes with $\rho$, we study the residue equation
$$L_XRT-RL_XT=\mathfrak{Res}_X[T] $$ 
and we interpret $\mathfrak{Res}_X[T]$ as an obstruction
to the fact that quantization 
(in our sense quantization consists in an operation of extension of distributions)
preserves classical symmetries.
More precisely, if we assume
that
we have 
an infinite dimensional Lie algebra of vector fields
 $\mathfrak{g}$, and that
$\forall X\in \mathfrak{g},L_XT=0$ 
($\mathfrak{g}$ is the Lie algebra of classical symmetries) 
then
$X\mapsto \mathfrak{Res}_X[T]$ is a coboundary
for the infinite dimensional Lie algebra of vector fields.
It can be thought in terms
of a quantum version of the Noether theorem.
$$\begin{array}{|c|c|}
\hline
\text{Physics terminology} &\text{Our interpretation} \\
\hline
\text{renormalization scheme}& \text{Extension operator }R:\mathcal{D}^\prime(M\setminus I)\mapsto \mathcal{D}^\prime(M)\\
\hline
\text{local counterterm} & \text{currents supported on }I\\
\hline
\text{ambiguity} & R_1T-R_2T\\
\hline 
\text{Symmetry} &\text{Lie algebra of vector fields }\mathfrak{g} \\
\hline
\text{anomaly} & \text{residue }L_XR-RL_X \\
\hline
\end{array}$$
% This is quite unusual in the litterature and comparable works involving residues are not common, at the exception of Harvey Lawson (give ref), Eells Allendoerfer (on the cohomology of smooth manifolds) and Leray.
\paragraph{Relationship to other work.}
During the
preparation 
of our work
appeared a very interesting 
preprint
of Todorov, Nikolov and Stora
\cite{TNS}
whose approach 
is
close to the spirit
of the present work.
The difference is that 
the authors 
of \cite{TNS}
work 
on flat space
time
and deal
with 
\emph{associate homogeneous distributions}
in the terminology
of \cite{Gelfand}.
They
found 
the same notion
of residues
as
poles
of the meromorphic
regularization
and 
as anomaly
of the scaling equations.
However their
anomaly
residue
is not as general
as ours since
it only applies
to associate homogeneous
distributions
whereas ours applies
to \textbf{all weakly homogeneous}
distributions
and our formulation has a more 
homological flavour
with
the Schwartz, De Rham theory
of currents.
Our definition
of anomaly is broader
since it applies 
for all vector fields
of symmetries
and we make
more explicit
the connection with
the concept of
\textbf{periods}.
This work 
complements
nicely
the work 
of Dorothea
Bahns
and Michal
Wrochna \cite{WB}
which gives 
very explicit
anomaly
formulas in 
Minkowski
space-time.
We also learned
recently that
the problem
of extension
of currents
was also studied
in Complex analytic geometry
(\cite{Sib,SibDinh1}).

\section{Currents and renormalisation.}
\subsection{Notation and definitions.}
Let us denote by $\mathcal{D}^\prime_k(M)$ the topological dual of the space $\mathcal{D}^k(M)$ of compactly supported
test forms of degree $k$.
Elements
of $\mathcal{D}^\prime_k(M)$ are called
\textbf{currents}. 
If 
$\alpha\in \Omega^{n-k}(M)$ is a smooth form of degree
$n-k$, then \textbf{integration} on $M$ gives a linear map 
$\omega\in \mathcal{D}^k(M)\mapsto \left\langle \alpha,\omega\right\rangle= \int_M \alpha\wedge \omega$ which allows to interpret $\alpha$ as an element 
of $\mathcal{D}^\prime_k$. Thus we have the continuous injection 
$\Omega^{n-k}(M)\hookrightarrow  \mathcal{D}^\prime_k(M)$
and the symbol $\left\langle \alpha,\omega\right\rangle$ extends integration on $M$
to arbitrary $\alpha\in \mathcal{D}_k^\prime(M)$.
Finally, an important structure theorem states that the topological dual space
of the space of smooth compactly supported sections of a vector
bundle $E$ are just distributional sections of the dual bundle $E^\prime$, 
in our specific case 
$\mathcal{D}_k^\prime(M)=
\mathcal{D}^\prime(M)\otimes_{C^\infty(M)}\Omega^{n-k}(M)$ 
(for more on distributional sections
see \cite{Bar-Ginoux-Pfaffle,Gros}).
In the book of Laurent Schwartz 
\cite{Schwartz}, it is explained 
why currents can be treated as exterior forms, for instance the usual operations of contraction with a vector field (interior product), exterior differentiation, exterior product with a smooth form and Lie derivatives are well defined for currents.  
For $U\subset M$, we will denote by $H_k\left( \mathcal{D}^\prime(U) \right)$
the subspace of currents of $\mathcal{D}_k^\prime(U)$ 
which are closed in $U$, and
we denote by $B_k\left( \mathcal{D}^\prime(U) \right)$
the space of exact currents in $U$.
We can define a differential $d$ on the graded $C^\infty(U)$-module $H_\star\left( \mathcal{D}^\prime(U) \right)$
which extends the exterior derivative
of smooth forms to currents, thus 
$\left(H_\star\left( \mathcal{D}^\prime(U) \right),d\right)$ is a chain complex: 
$$H_{\star+1}\left( \mathcal{D}^\prime(U) \right)\overset{d}{\mapsto} H_\star\left( \mathcal{D}^\prime(U) \right) .$$

\paragraph{From $t$ to vector valued currents.}

 Let $\omega\in \mathcal{D}^{k}(M)$ be a test form, then 
the scaling of $\omega$ is defined by pull-back $\omega_\lambda=e^{\log\lambda\rho\star}\omega$. 
Therefore, we define scaling of currents by the following formula, 
for all current $T\in \mathcal{D}_k^\prime(M)$ 
and test forms $\omega\in \mathcal{D}^{k}(M)$:
$$T_\lambda(\omega)=T(\omega_{\lambda^{-1}}) .$$
\begin{defi}
Let $U$ be a $\rho$-convex subset of $M$.
A current $T\in \mathcal{D}_k^\prime(U)$ is in $E_{s}(\mathcal{D}_k^\prime(U))$ iff for all test forms $\omega\in \mathcal{D}^{k}(U)$
$$\sup_{\lambda\in(0,1]}\vert\lambda^{-s}T_\lambda(\omega)\vert <\infty .$$ 
\end{defi}
fortunately, this definition coincides with the definition of  \cite{Meyer}
because in the work of Meyer: $\lambda^{-d}\int_{\mathbb{R}^d} T\varphi_{\lambda^{-1}} d^dx=\int_{\mathbb{R}^d} T\left(\varphi d^dx\right)_{\lambda^{-1}}=\int_{\mathbb{R}^d} T_\lambda\varphi d^dx ,$
Meyer views distributions as dual
of test forms $\omega=\varphi d^dx$
and the theory of Chapter $1$ applies verbatim to this case.

\paragraph{The Taylor formula for test forms.}
It is important to understand 
the formalism of Taylor expansion 
for currents 
because we need to subtract Taylor polynomials 
in order to define certain renormalized extensions of distributions.
Let $\omega$ be a smooth test form in $\mathcal{D}^{k}(M)$, then for a given $\rho$ using the normal form theorem of chapter $1$, we find that there exists a local coordinate chart
around each point of $I$ in which $\rho=h^j\partial_{h^j}$ 
and $\omega=\sum_{\vert I\vert+\vert J\vert=k}\omega_{IJ}(x,h)dx^I\wedge dh^J$ where $I,J$ are multi-indices.
We immediately see that $\omega_{IJ}$ have various homogeneities w.r.t. $\rho$ depending on the length $\vert J\vert$.
Thus, it is wiser to view $\omega$ as a function
of $(x,h;dx,dh)$ 
smooth in $(x,h)$ 
and polynomial in the Grassmann variables $(dx,dh)$
which are treated on an equal footing
as the variables $(x,h)$,
a function $\omega$
is said to be homogeneous 
of degree $n$
if $\omega(x,\lambda h,dx,\lambda dh)=\lambda^n\omega(x,h,dx,dh)
$. Consider the decomposition:  
$$\omega=\sum_{0\leqslant n \leqslant m} \omega_{n} + I_m(\omega)=P_m(\omega)+I_m(\omega) $$
in the sense of the Taylor expansion of Chapter $1$:
$$\omega_n= \frac{1}{n!}\left(\left(\frac{d}{dt}\right)^{n}e^{\log t\rho*}\omega\right)|_{t=0} $$
where $\omega_{n}$ is homogeneous of degree $n$.
We also have the formula for the Taylor remainder:
$$I_m(\omega)=\frac{1}{m!}\int_0^1 dt(1-t)^m \left(\frac{d}{dt} \right)^{m+1}\left(e^{\log t\rho*}\omega\right) $$  
 
\begin{ex} 
In this formalism $dh$ is homogeneous of degree $1$, $\left(\left(\frac{d}{dt}\right)e^{\log t\rho*}dh\right)|_{t=0}=\frac{d}{dt} tdh|_{t=0}=dh$. 
\end{ex} 
 
\paragraph{Conceptual meaning of the Taylor expansion.}

 We give an equivalent formula for $\omega_n$ due to F H\'elein: 
\begin{equation}
\omega_n=\lim_{t\rightarrow 0} \frac{1}{t^nn!}\left(\rho\right)...\left(\rho-n+1\right)e^{\log t\rho*}\omega=\lim_{t\rightarrow 0}t^{-n}\left(\begin{array}{c} n \\ \rho \end{array}\right)e^{\log t\rho*}\omega
\end{equation} 
which allows to give the following conceptual remark:
$$\lim_{t\rightarrow 0}\frac{1}{t^nn!}\left(\rho\right)...\left(\rho-n+1\right)e^{\log t\rho*}\omega(p)=\lim_{t\rightarrow 0}t^{-n}\left(\begin{array}{c} n \\ \rho \end{array}\right)e^{\log t\rho*}\omega(p)$$ 
depends \textbf{linearly} on the $n$-jet of $\omega$ at the point $e^{\rho\log t}p$. 
But it also depends \textbf{polynomially} 
on the $(n-1)$-jet of the smooth Euler vector field $\rho$ at the point $e^{\rho\log t}p$. 
Finally, $\omega_n$ depends \textbf{linearly} on the $n$-jet of $\omega$, and depends polynomially on the $(n-1)$-jet of $\rho$ at the point $\lim_{t\rightarrow 0}e^{\rho\log t}p\in I$. Since the $n$-jet of $\omega$ at the point $\lim_{t\rightarrow 0}e^{\rho\log t}p\in I$ is independent of $\rho$, we deduce that
the Taylor polynomial $P_m(\omega)=\sum_{n\leqslant m} \omega_n $
depends linearly on the $m$-jet of $\omega$ along $I$, but it depends polynomially in the $(m-1)$-jets of $\rho$ along $I$. As noticed by H\'elein, in an arbitrary local chart, $P_m(\omega)$ is in general \textbf{not a polynomial} hence
the term Taylor polynomial is somewhat abusive, however in the coordinates in which $\rho$ takes the normal form $\rho=h^j\partial_{h^j}$, $P_m(\omega)$ is a genuine polynomial in the variables $h^j,dh^j$.
Let us discuss the expression of the Taylor polynomial $P$ in coordinates. Let $\omega$ be a test $k$-form which reads $\omega=\sum_{\vert I\vert+\vert J\vert=k}\omega_{IJ}dx^I\wedge dh^J$, then
$$P_m(\omega)= \sum_{\vert I\vert+\vert J\vert=k,\vert\gamma\vert+\vert J\vert\leqslant m}\frac{h^\gamma}{\gamma !}\partial^\gamma_h\omega_{IJ}(x,0)dx^I\wedge dh^J.$$
\subsection{From Taylor polynomials to local counterterms via the notion of moments of a compactly supported distribution $T$.}
\paragraph{The representation theorem.}
  Before we discuss the results of Chapter $1$ in the current theoretic setting, we would like to discuss the issue of \textbf{local counterterms}. But even before we discuss the problem of local counterterms, we must recall the representation theorem for currents supported on $I$ (see \cite{Melrose}).
%On $\mathbb{R}^{n+d}$ with coordinates $(x,h)$ where $I=\{h=0\}\simeq \mathbb{R}^n$, we recall 
%the distribution $\partial_h^\alpha\delta_I$ is defined as follows: $\forall \omega \in \mathcal{D}^{d}(\mathbb{R}^{n+d})$, $\left\langle \partial_h^\alpha\delta_I,\omega \right\rangle=(-1)^\alpha\int_{\mathbb{R}^n} \partial_h^\alpha\omega(x,0) d^nx $. 
For any distribution $t_{\alpha J}\in\mathcal{D}^\prime(I)$, if we denote by 
$i:I\hookrightarrow M$ the canonical embedding of $I$ in $M$ then 
$i_\star t_{\alpha J}$ is the push-forward of $t_{\alpha J}$ in $M$:
$$\forall\varphi\in\mathcal{D}(M), \left\langle i_\star t_{\alpha J}, \varphi \right\rangle=\left\langle t_{\alpha J}, \varphi\circ i \right\rangle.$$ 
Let $I\subset M$ be a closed embedded submanifold of $M$.
\begin{thm}\label{repsthm}
Let us consider a current $t\in \mathcal{D}_*\left( M\right)$ supported on $I$.
Then for any local 
system of coordinates $(h^j)_j$ transversal to $I$,
$t$ has a unique 
decomposition as locally finite
linear combinations of transversal derivatives
of push-forward to $M$ of currents $t_{\alpha J}$ 
in $\mathcal{D}_*^\prime (I)$:
\begin{equation}
t=\sum_{\alpha,J} \partial^\alpha_h\left(i_\star t_{\alpha J}\right)\wedge dh^J. 
\end{equation}
\end{thm}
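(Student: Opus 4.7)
The plan is to reduce the theorem to the classical Schwartz structure theorem for scalar distributions supported on $I$ by decomposing the current with respect to a local frame of transversal forms, and then to reassemble and globalize via a partition of unity. Throughout, I work in a local chart adapted to $I$ in which $I=\{h=0\}$, with coordinates $(x^i,h^j)$; the system $(h^j)_j$ is the chosen transversal system.

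First I would fix such a local chart $U$ and write any current $t\in\mathcal{D}^\prime_k(U)$ supported on $I\cap U$ in the canonical frame of basic forms
\begin{equation}
t=\sum_{|I'|+|J|=n-k} t_{I'J}\,dx^{I'}\wedge dh^{J},
\end{equation}
where each coefficient $t_{I'J}\in\mathcal{D}^\prime(U)$ is obtained by pairing $t$ with a dual compactly supported test function in $\mathcal{D}(U)$ against a complementary basic form. Since $t$ is supported on $I$, each $t_{I'J}$ is a scalar distribution supported on $I\cap U$. At this point the classical Schwartz structure theorem (\cite{Schwartz} Theorem 37, \cite{Hormander} Theorem 2.3.5), which is recalled and used earlier in the paper, applies componentwise: each $t_{I'J}$ has a unique representation
\begin{equation}
t_{I'J}=\sum_{\alpha} \partial^\alpha_h\bigl(i_{\star}u_{I'J\alpha}\bigr),\qquad u_{I'J\alpha}\in\mathcal{D}^\prime(I\cap U),
\end{equation}
with the family $\{\operatorname{supp} u_{I'J\alpha}\}_{\alpha}$ locally finite.

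Next, I would regroup these scalar representations into a current-valued one. For each multi-index $J$ transversal to $I$ and each $\alpha$, set
\begin{equation}
t_{\alpha J}:=\sum_{I'}u_{I'J\alpha}\,dx^{I'}\in\mathcal{D}^\prime_\star(I\cap U),
\end{equation}
so that on $U$
\begin{equation}
t=\sum_{\alpha,J}\partial^\alpha_h\bigl(i_{\star}t_{\alpha J}\bigr)\wedge dh^{J},
\end{equation}
as a straightforward computation in the basic frame shows (the transversal derivatives commute with wedging by the constant-coefficient forms $dx^{I'}$, and the wedge with $dh^{J}$ restores the correct total degree $n-k$). Uniqueness of the decomposition is inherited coefficient-by-coefficient from the scalar Schwartz theorem: two expressions of the above form would give two expressions of each $t_{I'J}$ as a sum of transversal derivatives of pushforwards, forcing all coefficients $u_{I'J\alpha}$, and hence all $t_{\alpha J}$, to coincide.

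Finally, I would pass from the local statement to the global one by a partition of unity $(\chi_a)_a$ subordinated to a locally finite cover $(U_a)_a$ of a neighbourhood of $\operatorname{supp} t$ by charts adapted to $I$ in which $(h^j)$ realises the prescribed transversal system. On each $U_a$ one obtains a decomposition as above; uniqueness on overlaps (where the transversal coordinates coincide by construction) guarantees that the local currents $t_{\alpha J}^{(a)}$ patch into global currents $t_{\alpha J}\in\mathcal{D}^\prime_\star(I)$. The main subtlety, and the step I would devote most care to, is precisely this gluing: one must verify that the transversal derivatives $\partial^\alpha_h$ and the push-forward $i_\star$ are intrinsically defined once a transversal system $(h^j)$ is fixed, so that the local decompositions are genuinely compatible on overlaps without further correction terms; this is where fixing the transversal system (rather than a full chart) in the statement becomes essential. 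Local finiteness of $\{\operatorname{supp} t_{\alpha J}\}$ follows from the local finiteness in each chart and the local finiteness of the cover.
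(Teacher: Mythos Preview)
Your proof is correct and follows essentially the same approach as the paper: decompose the current in the basic frame $dx^{I'}\wedge dh^J$ into scalar distributional coefficients supported on $I$, apply Schwartz's structure theorem (Theorem 37 in \cite{Schwartz}) componentwise, and regroup. The paper's proof is considerably terser---it does not spell out the regrouping into currents $t_{\alpha J}\in\mathcal{D}^\prime_\star(I)$ nor the globalization by partition of unity---but the strategy is identical.
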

\begin{proof}
We first use the decomposition of a current $t\in \mathcal{D}^\prime_k(M)$ as a sum 
$t_{I,J} dx^I\wedge dh^J$ where $t_{I,J}\in \mathcal{D}^\prime_0(M)$ are $0$-currents (see \cite{Giaquinta} 2.3 p.~123 and \cite{Rham} Chapter 3 p.~36).
Then the $0$-currents $t_{I,J}$ are in fact distributions supported on $I$, then we apply the structure theorem 37 p.~102 \cite{Schwartz} which describes distributions supported on a submanifold, which gives the desired result (also see \ref{lemmbound}).
\end{proof}

Let us explain the ideas of the concept of moments, first we fix a coordinate system which gives a basis $dx^i,dh^j$.
Then we define the \textbf{moments} $c_{\alpha I}\in D_*^\prime\left(I\right)$ of $T\in\mathcal{D}_*^\prime(M)$ by the push-forward formula,
if the projection $\pi:(x,h)\mapsto x$ is proper on 
$\text{supp }T$: 
\begin{eqnarray}
\forall \omega\in \mathcal{D}(I), \left\langle c_{\alpha I}(T),\omega\right\rangle
=\int_I\int_h\left(T \wedge\frac{h^\alpha}{\alpha!} \left(\frac{\partial}{\partial h^I}\lrcorner dh^d\right)\wedge \omega(x)\right). 
\end{eqnarray}
These moments are indexed by the multi-indices $(\alpha,I)$ and satisfy the identity
\begin{eqnarray}
\left\langle T , P_m(\omega)\right\rangle=\sum_{\vert\alpha\vert+d-\vert I\vert\leqslant m}\left\langle c_{\alpha,I}\wedge dh^I \partial_h^\alpha\delta_I ,\omega \right\rangle
\end{eqnarray}
In the case $n=0$, and $I=\{0\}$ is the \textbf{origin} of $\mathbb{R}^d$ and $T(h)$ 
is an integrable function in $L^1(\mathbb{R}^d)$,  
this definition coincides with the moment of the function $T\in L^1(\mathbb{R}^d)$ (see \cite{DuistermaatKolk} Proposition 6.3 p.~52). 
Now, we notice that when $t\in D_*^\prime(M)$ is supported on $I$, the moments $c_{\alpha,J}(t)$ of $t$ exactly coincide with the coefficients $t_{\alpha,J}$ in the representation (\ref{repsthm}).
The concepts of moments are crucial when we wish to represent currents supported on $I$ or residues.
\subsection{The results of Chapter $1$.}
Now that we have the 
suitable 
language to describe 
local counterterms, 
we can recall the results of Chapter $1$ in this new current theoretic setting: 
\begin{prop}
Let $T\in E_{s}\left(\mathcal{D}_k^\prime(M\setminus I)\right)$ and $p=\sup(0,k-n)$. 
If $s+p>0$ then for all $\omega\in \mathcal{D}_k(M)$
and $\chi$ is some smooth function which is equal to $1$
in a neighborhood of $I$: 
\begin{equation}
\lim_{\varepsilon\rightarrow 0}\left\langle T\left(\chi-e^{-\log\varepsilon\rho*}\chi\right), \omega\right\rangle
\end{equation}
exists.\\
If $s+p\leqslant 0$ and let $m\in\mathbb{N}$ s.t. $-m-1<s\leqslant -m$,
then for all $\omega\in \mathcal{D}_k(M)$:
\begin{equation}
\lim_{\varepsilon\rightarrow 0}\left\langle T\left(\chi-e^{-\log\varepsilon\rho*}\chi\right), I_m(\omega)\right\rangle
%=\lim_{\varepsilon\rightarrow 0}\left\langle T\left(\chi-e^{-\log\varepsilon\rho*}\chi\right), \omega \right\rangle-\sum_{\vert\alpha\vert+\vert I\vert\leqslant m}\left\langle c_{\alpha,I}(\varepsilon)\wedge dh^I D^\alpha\delta_I ,\omega \right\rangle
\end{equation}
exists where $I_m(\omega)$ is the generalized Taylor remainder
\begin{equation}
I_m(\omega)=\frac{1}{m!}\int_0^1 dt(1-t)^m \left(\frac{d}{dt} \right)^{m+1}\left(e^{\log t\rho*}\omega\right) 
\end{equation}
%and
%\begin{equation}
%c_{\alpha,I}(\varepsilon)=\pi_*\left(T\left(\chi-e^{-\log\varepsilon\rho*}\chi\right) \wedge\frac{h^\alpha}{\alpha!} \left(i_{\frac{\partial}{\partial h^I}}\lrcorner dh^d\right)\right)  
%\end{equation}
\end{prop}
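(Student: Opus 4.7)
The plan is to adapt the extension procedure of Chapter 1 (Theorems \ref{thm1} and \ref{thm2}) to the current-theoretic setting, exploiting the fact that test $k$-forms on $M$ naturally carry extra transverse homogeneity that accounts for the shift by $p=\sup(0,k-n)$ in the hypothesis. First, I would localize. Using a partition of unity subordinate to an atlas of local charts (as in the gluing argument following Theorem \ref{thmfin2}), and using the invariance of $E_s(\mathcal{D}^\prime_k)$ under change of Euler vector field (the argument of Theorem \ref{locthmGOOD} carries over verbatim to currents, since scaling commutes with contraction and the Taylor decomposition is intrinsic), it suffices to work in a chart where $\rho = h^j\partial_{h^j}$ and $I=\{h=0\}$. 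In this chart any test $k$-form decomposes as $\omega = \sum_{|A|+|J|=k} \omega_{AJ}(x,h)\,dx^A \wedge dh^J$.

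Second, I would use the H\"ormander trick from Chapter 1 to write
\begin{equation*}
\chi - e^{-\log\varepsilon\rho\star}\chi = \int_\varepsilon^1 \frac{d\lambda}{\lambda}\,\psi_{\lambda^{-1}},\qquad \psi = -\rho\chi,
\end{equation*}
so that $\psi$ is supported on an annulus around $I$. The quantity of interest becomes $\int_\varepsilon^1 \frac{d\lambda}{\lambda}\langle T\psi_{\lambda^{-1}},\omega\rangle$ in the convergent case, and $\int_\varepsilon^1 \frac{d\lambda}{\lambda}\langle T\psi_{\lambda^{-1}},I_m(\omega)\rangle$ in the renormalized case. By a change of variables $h \mapsto \lambda h$ inside the pairing, one transfers the scaling from $\chi$ to $T$ and to $\omega$: this is exactly the same manipulation as in the proof of Proposition \ref{ren1} and Proposition \ref{ren2}, except that now the pullback of $\omega_{AJ}\,dx^A\wedge dh^J$ under $e^{\log\lambda\rho\star}$ equals $\lambda^{|J|}\,\omega_{AJ}(x,\lambda h)\,dx^A\wedge dh^J$, producing an \emph{extra} factor of $\lambda^{|J|}$ in the integrand relative to the purely distributional case.

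Third, I would exploit the bound $|J|\geqslant k-n$ that holds in every non-zero component of $\omega$ (because $|A|\leqslant n$), so that $|J|\geqslant p=\sup(0,k-n)$. Combined with $T\in E_s$, this shows that $\lambda^{-s-p}\langle T\psi_{\lambda^{-1}},\omega\rangle$ is bounded by a seminorm of $\omega$, so the integrand of $\int_\varepsilon^1 \frac{d\lambda}{\lambda}\langle T\psi_{\lambda^{-1}},\omega\rangle$ is dominated by $\lambda^{s+p-1}$. When $s+p>0$ this is integrable at $0$ and the limit exists. When $s+p\leqslant 0$, using $I_m(\omega)=\frac{1}{m!}\int_0^1 (1-t)^m (d/dt)^{m+1}(e^{\log t\rho\star}\omega)\,dt$ together with the homogeneity bookkeeping above, the remainder $I_m(\omega)$ gains the additional factor $\lambda^{m+1}$ (exactly as in the proof of Proposition \ref{ren2}), upgrading integrability to $\lambda^{s+p+m}$, which is integrable once $-m-1<s\leqslant -m$ and $p\geqslant 0$.

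The main obstacle is the careful homogeneity accounting: one must verify that the $\lambda^{|J|}$ gain from the form factor and the $\lambda^{m+1}$ gain from the Taylor remainder truly combine linearly, uniformly over all components $\omega_{AJ}$. Concretely, this amounts to checking that $I_m$ commutes with the decomposition $\omega=\sum_{A,J}\omega_{AJ}dx^A\wedge dh^J$ in a way compatible with the grading by $|J|$, which follows from the fact that in the normal-form coordinates, $P_m(\omega_{AJ}dx^A\wedge dh^J)=\sum_{|\gamma|+|J|\leqslant m}\frac{h^\gamma}{\gamma!}\partial_h^\gamma\omega_{AJ}(x,0)dx^A\wedge dh^J$ is a polynomial in $(h,dh)$ of total weight $\leqslant m$. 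Once this is established, estimates identical to those in Corollary \ref{ren1} and Proposition \ref{ren2} close the argument.
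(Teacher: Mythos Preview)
Your approach is essentially the same as the paper's: localize to a chart with $\rho=h^j\partial_{h^j}$, decompose test forms as $\sum \omega_{AJ}\,dx^A\wedge dh^J$, use the constraint $|J|\geqslant k-n$ to extract the extra transverse homogeneity, and reduce to the Chapter~1 estimates. The paper's own proof is in fact a two-line sketch of exactly this reduction.

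One bookkeeping correction in your second case: the gain from the form factor $dh^J$ and the gain from the Taylor remainder $I_m$ do \emph{not} combine additively. Your own formula $P_m(\omega_{AJ}\,dx^A\wedge dh^J)=\sum_{|\gamma|+|J|\leqslant m}\frac{h^\gamma}{\gamma!}\partial_h^\gamma\omega_{AJ}(x,0)\,dx^A\wedge dh^J$ shows that $I_m$ is defined relative to the \emph{total} weight in $(h,dh)$; hence $(I_m(\omega))_\lambda=O(\lambda^{m+1})$ already absorbs the $\lambda^{|J|}$ contribution. The correct bound on the integrand (with respect to $d\lambda$) is therefore $\lambda^{s+m}$, not $\lambda^{s+p+m}$. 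This does not affect the conclusion, since $s>-m-1$ gives $s+m>-1$ directly; but the stronger bound you state is false in general (take $k=n+d$, $p=d$, and a top-degree form whose coefficient has a Taylor remainder of exactly order $m+1-d$).
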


\begin{proof}
We decompose the test forms $\omega$ in local coordinates $(x,h,dx,dh)$ then we reduce the proof exactly to the same proofs as in Chapter $1$. 
There are differences
involved because we are dealing with forms. In normal coordinates $(x,h)$ for $\rho$
%where for any frozen pair $\alpha,\beta$
%$$\sum_{\vert\gamma\vert+\vert\beta\vert\leqslant m}\partial_h^\gamma\omega_{\alpha\beta}(x,0)\frac{h^\gamma}{\gamma !} dx^\alpha dh^\beta + I_{\alpha\beta(m-\vert\beta\vert)}(x,h)dx^\alpha dh^\beta=\omega_{\alpha\beta}(x,h)dx^\alpha dh^\beta $$
If $\omega$ is a $k$ form, in the decomposition $\omega=\sum_{\vert I\vert+\vert J\vert=k}\omega_{IJ} dx^I dh^J$ the length $\vert J\vert$ of the multi-index $J$ is at least equal to 
$k-n$ because there are $n$ coordinate functions $(x^i)_{i=1\cdots n}$. 
Thus $\omega$ is in fact weakly homogeneous of degree $k-n$ which explains the criteria $s+k-n>0$. 
Now the second case is simple since $I_m(\omega)$ is weakly homogeneous of degree $m+1$.
\end{proof}

 We would like to introduce a new notation for the operation of regularization, we call it $R_\varepsilon$, and we define it as follows:
 
\begin{defi}\label{meyercurrents}
We define the continuous linear operator $R_\varepsilon$  on $E_{s}\left(\mathcal{D}_k^\prime(M\setminus I)\right)$ as follows. 
Let $p=\sup(0,k-n)$.
\begin{itemize}
\item If $s+p>0$ then for all $\omega\in \mathcal{D}_k(M)$: 
\begin{equation}
\left\langle R_{\varepsilon}T, \omega\right\rangle=\left\langle T\left(1-e^{-\log\varepsilon\rho*}\chi\right), \omega\right\rangle,
\end{equation}
and $\lim_{\varepsilon\rightarrow 0} R_\varepsilon T=RT$ exists in $\mathcal{D}_k^\prime(M)$ and defines an extension $RT$ of $T$.
\item If $s+p\leqslant 0$ and let $m\in\mathbb{N}$ s.t. $-m-1<s\leqslant -m$, then for all $\omega\in \mathcal{D}_k(M)$:
\begin{equation}
\left\langle R_\varepsilon T, \omega\right\rangle=\left\langle T\left(\chi-e^{-\log\varepsilon\rho*}\chi\right), I_m(\omega) \right\rangle+\left\langle T\left(1-\chi\right), \omega\right\rangle,
\end{equation}
and $\lim_{\varepsilon\rightarrow 0} R_\varepsilon T=RT$ exists in $\mathcal{D}_k^\prime(M)$ and defines an extension $RT$ of $T$.
\end{itemize}
%or by duality, we can introduce the formula
%\begin{eqnarray}
% \left\langle\frac{1}{m!}\int_\varepsilon^1 dt(1-t)^m \left(\frac{d}{dt} \right)^{m+1} e^{-\log t\rho*}\left(T\left(\chi-e^{-\log\varepsilon\rho*}\chi\right)\right),\omega\right\rangle+\left\langle T\left(1-\chi\right), \omega\right\rangle
%\\  R_\varepsilon T= \frac{1}{m!}\int_\varepsilon^1 dt(1-t)^m \left(\frac{d}{dt} \right)^{m+1} e^{-\log t\rho*}\left(T\left(\chi-e^{-\log\varepsilon\rho*}\chi\right)\right)+T\left(1-\chi\right)
%\end{eqnarray}
\end{defi}

\section{Renormalization, local counterterms and residues.}

% In this section, we show the regularization scheme of Meyer produces residues in the sense of Griffiths Harris (see the section $3$ in Principles of algebraic geometry) and our definition has nothing to do with complex analysis. 
% 
%
%
%
% This is quite unusual in the litterature and comparable works involving residues are not common, at the exception of Harvey Lawson (give ref), Eells Allendoerfer (on the cohomology of smooth manifolds) and Leray.
% 
% We give the minimal regularity hypothesis on the current $T$ which guaranties the existence of the residues. 
\subsection{The ambiguities of the operator $R_\varepsilon$ and the moments of a distribution $T$.}

 Actually, first notice that any current $T$ in $\mathcal{D}_*^\prime(M)$ is also an element of $\mathcal{D}_*^\prime(M\setminus I)$ by the pull-back $i^*T$ by the restriction map $i:M\setminus I\hookrightarrow M$. Thus we ask ourselves a very natural question, does the restriction followed by the extension operation allows to reconstruct the element $T$, in other words do we have
$\lim_{\varepsilon\rightarrow 0}R_\varepsilon i^*T=T  $ ?
The answer is no ! A distribution supported on $I$ is automatically killed by $R_\varepsilon,\forall \varepsilon>0$ thus  if $T$ is supported on $I$
$\lim_{\varepsilon\rightarrow 0}R_\varepsilon i^*T=0$.
This idea is strongly related to the discussion in \cite{Meyer} Chapter $1$, 
let $t$ be a tempered distribution, does 
the Littlewood--Paley series $\sum_{j=-N}^\infty \Delta_j(t)$ 
converges \emph{weakly} to $t$ when $N\rightarrow +\infty$ ?
The answer is no! There is convergence modulo floating polynomials in Fourier space (see \cite{Meyer} Proposition 1.5 p.~15). 
The floating polynomials in Fourier space are in fact corrections that we have to subtract
from
the Littlewood--Paley series 
in order to make it convergent and 
these polynomials 
should be related 
to vanishing moments conditions (see Meyer chapter $2$ p.~45).
We introduce a linear operator 
$A$ which describes the ambiguities of the restriction-extension operation on the distribution $T$.
\begin{defi}
Let $T\in \mathcal{D}_k^\prime(M)$, then we define the ambiguity as
$$AT=\lim_{\varepsilon \rightarrow 0} \left(T-R_\varepsilon T\right) .$$
The operator $A$ depends on $\chi$.
\end{defi}
The ambiguity is a non trivial operator because of the example discussed previously. 
As usual, we motivate our theorem with the simplest fundamental example
\begin{ex}
$\delta\in \mathcal{D}^\prime(\mathbb{R})$ is a well defined distribution.
But $\forall \varepsilon>0, R_\varepsilon\delta=0$ because $0$ never meets the support of the cut-off hence
$$A\delta=\lim_{\varepsilon\rightarrow 0} \left(\delta-R_{\varepsilon}\delta\right) =\delta $$
\end{ex}
We state a simple theorem which expresses the ambiguity $A$ in terms of the moments of $T\chi$. 
\begin{thm}\label{ambiguitythm}
Let $T\in E_{s}\left(\mathcal{D}^\prime_k(M)\right)$ where $-(m+1)<s\leqslant -m,m\in\mathbb{N}$, 
then 
the ambiguity $AT$ is given by the following formula:
\begin{eqnarray}
\forall \omega\in\mathcal{D}^k(M), AT(\omega)=\left\langle T\chi, P_m(\omega) \right\rangle,
\end{eqnarray}
where $P_m(\omega)=\sum_{k\leqslant m}\omega_k$.
\end{thm}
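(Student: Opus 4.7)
I will directly compute $T-R_\varepsilon T$ using the explicit formula for $R_\varepsilon$ in Definition \ref{meyercurrents}. Since $-(m+1)<s\leqslant -m$, we are in the renormalized case, so writing $\chi_\varepsilon:=e^{-\log\varepsilon\rho*}\chi$,
\[
\langle R_\varepsilon T,\omega\rangle = \langle T(\chi-\chi_\varepsilon),I_m(\omega)\rangle + \langle T(1-\chi),\omega\rangle.
\]
Decomposing $\omega = P_m(\omega)+I_m(\omega)$ and expanding $T=T\chi+T(1-\chi)$, the $T(1-\chi)$ terms cancel and I obtain after a short algebraic manipulation
\[
(T-R_\varepsilon T)(\omega) = \langle T\chi,P_m(\omega)\rangle + \langle T\chi_\varepsilon,I_m(\omega)\rangle.
\]
So the theorem reduces to showing the second term vanishes as $\varepsilon\to 0$, the first being $\varepsilon$-independent and giving the desired $\langle T\chi,P_m(\omega)\rangle$.

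For the vanishing of $\langle T\chi_\varepsilon,I_m(\omega)\rangle$, the plan is to exploit the mismatch between the shrinking support of $\chi_\varepsilon$ and the vanishing order of $I_m(\omega)$. In normal coordinates for $\rho$, $\chi_\varepsilon(x,h)=\chi(x,h/\varepsilon)$ so $\chi_\varepsilon$ is supported in a tube of width $O(\varepsilon)$ around $I$; simultaneously $I_m(\omega)$ vanishes to order $m+1$ in $h$ (recall that in our graded convention both $h$ and $dh$ have weight one, so the Taylor remainder of a $k$-form is genuinely weakly homogeneous of degree $m+1$). I would apply the scaling substitution $h=\varepsilon h'$, rewriting the pairing as
\[
\langle T\chi_\varepsilon,I_m(\omega)\rangle = \varepsilon^{-s}\bigl\langle (\varepsilon^{-s}T_\varepsilon)\,\chi,\,\varepsilon^{-(m+1)}e^{-\log\varepsilon\rho*}I_m(\omega)\bigr\rangle\cdot \varepsilon^{s+m+1},
\]
where the first factor in the bracket is bounded by hypothesis $T\in E_s$, and the second factor, being $\varepsilon^{-(m+1)}$ times the pull-back of a test form homogeneous of degree $m+1$, remains bounded in $\mathcal{D}^k(M)$ as $\varepsilon\to 0$. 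This is formally the same estimate as in the proof of Proposition \ref{ren2} (Chapter 1), transposed to the current setting. The strict inequality $s>-(m+1)$ then yields the overall bound $O(\varepsilon^{s+m+1})\to 0$.

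The main technical point will be step 3, namely formalizing the uniform-boundedness estimate for the family $\varepsilon^{-(m+1)}e^{-\log\varepsilon\rho*}I_m(\omega)$ in $\mathcal{D}^k(M)$. The cleanest way is to use the integral representation
\[
I_m(\omega)=\frac{1}{m!}\int_0^1 (1-t)^m\Bigl(\tfrac{d}{dt}\Bigr)^{m+1}e^{\log t\rho*}\omega\,dt,
\]
combined with the change of variable $t\mapsto \varepsilon t$ (which rescales the $h$-variables and absorbs exactly the $\varepsilon^{m+1}$ factor), reducing to a uniformly bounded family of test forms supported in $\text{supp}\,\chi$. The pairing with the bounded family $\varepsilon^{-s}T_\varepsilon\chi$ in $\mathcal{D}^\prime_k$ then gives the claim by Banach--Steinhaus, exactly as in Chapter 1.
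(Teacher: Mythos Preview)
Your proposal is correct and follows essentially the same route as the paper: decompose $\omega=P_m(\omega)+I_m(\omega)$, subtract the definition of $R_\varepsilon T$ from $\langle T,\omega\rangle$ to obtain $\langle T\chi,P_m(\omega)\rangle+\langle T\chi_\varepsilon,I_m(\omega)\rangle$, then kill the second term by the scaling identity $\langle T\chi_\varepsilon,I_m(\omega)\rangle=\langle T_\varepsilon\chi,(I_m(\omega))_\varepsilon\rangle=O(\varepsilon^{s+m+1})$. Two cosmetic slips to fix: in your displayed formula the outer factor $\varepsilon^{-s}$ is spurious, and the pull-back on $I_m(\omega)$ should be $e^{+\log\varepsilon\rho*}$ (i.e.\ $(I_m(\omega))_\varepsilon$), not $e^{-\log\varepsilon\rho*}$; with those corrections the bookkeeping of powers of $\varepsilon$ balances exactly as in the paper.
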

\begin{proof}
Yves Meyer defines the ambiguity by the Bernstein theorem. We will give a more direct in space proof which does not use the Fourier transform. The first idea is the concept of moments of a current $T\chi\in \mathcal{D}_k^\prime\left(M\right)$. First write the duality coupling in simple form: 
 $$\left\langle T,\omega \right\rangle=\left\langle T(1-\chi),\omega \right\rangle+\left\langle T\chi,\omega \right\rangle=\left\langle T(1-\chi),\omega \right\rangle+\left\langle T\chi,P_m(\omega)\right\rangle+\left\langle T\chi,I_m(\omega) \right\rangle $$
where $P$ is the Taylor polynomial $\sum_{k\leqslant m} \omega_k$.
We remind the definition of $R_\varepsilon T$   
$$\left\langle R_\varepsilon T, \omega\right\rangle=
\left\langle T\left(1-\chi\right), \omega\right\rangle
+\left\langle T\left(\chi-e^{-\log\varepsilon\rho*}\chi\right), I_m(\omega)\right\rangle .$$   
Then we immediately find:
$$\left\langle T,\omega \right\rangle-\left\langle R_\varepsilon T, \omega\right\rangle=\left\langle T\chi,P_m(\omega)\right\rangle+\left\langle Te^{-\log\varepsilon\rho*}\chi, I_m(\omega) \right\rangle $$
now notice that 
$$\left\langle Te^{-\log\varepsilon\rho*}\chi, I_m(\omega) \right\rangle =\left\langle \left(e^{\log\varepsilon\rho*}T\right)\chi, e^{\log\varepsilon\rho*}I_m(\omega) \right\rangle =\left\langle T_\varepsilon\chi, (I_m(\omega))_{\varepsilon} \right\rangle$$
where
$$\exists C>0, \vert\left\langle T_\varepsilon\chi, (I_m(\omega))_{\varepsilon} \right\rangle\vert\leqslant C\varepsilon^{s+m+1}\rightarrow 0$$
since 
$\chi(I_m(\omega))_{\varepsilon} $
is a bounded family of test
forms,
thus 
$$AT(\omega)=\left\langle T\chi,P_m(\omega)\right\rangle$$
where $\omega=P_m(\omega)+I_m(\omega)$ and the final result follows from the definition of the notion of moment of the distribution $T\chi$.
\end{proof} 
% We can formalize the previous result in the form of an exact sequence
% 
%\begin{coro}
% Let $E_{sk}\left(I\subset M\right)$ be the currents of degree $k$ supported on $I$ and weakly homogeneous of degree $s$. $E_{s}\left(\mathcal{D}_k^\prime(M\setminus I)\right),E_{sk}\left(M\right)$ are the $k$-currents weakly homogeneous of degree $s$ defined on $M\setminus I,M$ respectively.
% 
% We have the short exact sequence
%\begin{equation}
%0\rightarrow E_{s}\left(\mathcal{D}_k^\prime(M\setminus I)\right)\overset{R}\rightarrow E_{sk}\left(M\right)\overset{A}\rightarrow E_{sk}\left(I\subset M\right) \rightarrow 0
%\end{equation}
%\end{coro} 
\subsubsection{The dependence of $R$ on the choice of $\chi,\rho$.}
We would also like to describe the dependance of 
the operator $R$ on the choice of 
$\chi$ and $\rho$.
As usual, the result will be expressed in terms of \textbf{local counterterms}.
\paragraph{Changing $\chi$.} 
Let $\chi_1,\chi_2$ be two functions such that $\chi_i=1,i=1,2$ 
in a neighborhood of $I$ and $\rho\chi_i$
is uniformly 
supported in 
an annulus
domain of $M$.
Let $R^i_\varepsilon,i=1,2$ be 
the corresponding regularization operators on 
$E_{s}\left(\mathcal{D}_k^\prime(M\setminus I)\right)$
defined as follows:
for $p=\sup(k-n,0)$,
if $s+p\leqslant 0$ let $m\in\mathbb{N}$ 
s.t. $-m-1<s\leqslant m$, then 
the regularization operator $R^i$
corresponding to each $\chi_i,i=(1,2)$ is given by the formula 
\begin{equation}\left\langle R^i_\varepsilon T, \omega\right\rangle=\left\langle T\left(\chi_i-e^{-\log\varepsilon\rho*}\chi_i\right), I_m\left(\omega\right) \right\rangle+\left\langle T\left(1-\chi_i\right), \omega\right\rangle,
\end{equation}
and $\lim_{\varepsilon\rightarrow 0} R^i_\varepsilon T=R^iT$ exists in $\mathcal{D}_k^\prime(M)$ and defines an extension $R^iT$ of $T$
if otherwise $s+p\geqslant 0$ then 
$R^iT=\lim_{\varepsilon\rightarrow 0} T(1-\chi_{i\varepsilon^{-1}})$.

\begin{thm}\label{Changingchi}
Let $T\in E_{s}\left(\mathcal{D}_k^\prime(M\setminus I)\right)$.
If $s+p>0$ then $ R^1T=R^2T$ (i.e. 
$R$ does not depend on the choice of $\chi$).
If $s+p\leqslant 0$ then 
\begin{eqnarray}
\left\langle \left(R^1-R^2\right) T, \omega\right\rangle=\left\langle T\left(\chi_2-\chi_1\right), P_m(\omega)\right\rangle,
\end{eqnarray}
where $m\in\mathbb{N}$ is s.t. $-m-1<s\leqslant -m$.
\end{thm}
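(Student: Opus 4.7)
The plan is to compute the difference $R^1_\varepsilon T - R^2_\varepsilon T$ directly from Definition \ref{meyercurrents} and pass to the limit $\varepsilon\to 0$, using the weak homogeneity $T\in E_s$ to control the vanishing of the $\varepsilon$-dependent pieces. Throughout, set $\psi=\chi_2-\chi_1$: this is a smooth function vanishing in a neighborhood of $I$, so $T\psi$ is well defined as an element of $\mathcal{D}_k^\prime(M)$, and in a normal chart for $\rho$ the scaled function $\psi_{\varepsilon^{-1}}$ is supported in a shrinking annulus $\{\varepsilon a\leqslant |h|\leqslant \varepsilon b\}$ bounded away from $I$ for each fixed $\varepsilon>0$.

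When $s+p>0$, the definition gives directly $R^1_\varepsilon T - R^2_\varepsilon T = T\psi_{\varepsilon^{-1}}$. Testing against $\omega\in \mathcal{D}^k(M)$ and shifting the scaling from the test object to the current via the defining identity of the scaled current, I would rewrite
$$\langle T\psi_{\varepsilon^{-1}},\omega\rangle = \varepsilon^{p+s}\,\langle \varepsilon^{-s}T_\varepsilon,\, \psi\,\omega_\varepsilon\rangle.$$
Since $\varepsilon^{-s}T_\varepsilon$ is bounded in $\mathcal{D}^\prime(M\setminus I)$ by the hypothesis $T\in E_s$, and $\psi\omega_\varepsilon$ is a bounded family of test forms with fixed support in an annulus disjoint from $I$, the pairing on the right is uniformly bounded, so the prefactor $\varepsilon^{s+p}$ kills the expression as $\varepsilon\to 0$. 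Equivalently, one can invoke the uniqueness part of Theorem \ref{removsing}: both $R^1T$ and $R^2T$ are extensions of $T$ lying in $E_s(M)$, so they must coincide.

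When $s+p\leqslant 0$, I would write $\omega=P_m(\omega)+I_m(\omega)$ and unwind the definition algebraically. A straightforward manipulation gives
$$\langle (R^1_\varepsilon-R^2_\varepsilon)T,\omega\rangle = \langle T\psi,P_m(\omega)\rangle + \langle T\psi_{\varepsilon^{-1}}, I_m(\omega)\rangle,$$
the $\varepsilon$-independent cross term $\langle T\psi,I_m(\omega)\rangle$ cancelling against the $P_m$-complement inside $\langle T\psi,\omega\rangle$. The first summand is independent of $\varepsilon$ and coincides with the announced formula, so only the second term has to be shown to vanish in the limit.

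The main technical point, and the step where the weak homogeneity hypothesis has to be used most critically, is the estimate $\langle T\psi_{\varepsilon^{-1}}, I_m(\omega)\rangle \to 0$. The new ingredient compared to the first case is that the Taylor remainder $I_m(\omega)$ has vanishing $m$-jet along $I$: from its integral representation (after the substitution $s=\varepsilon t$ in the integral over $[0,1]$), the scaled form $(I_m(\omega))_\varepsilon$, together with all its derivatives, is $O(\varepsilon^{m+1})$ uniformly on compacts. The same scaling manipulation as in the first case then upgrades the prefactor to $\varepsilon^{s+p+m+1}$, which tends to zero by virtue of $-m-1<s$ and $p\geqslant 0$. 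Passing to the limit $\varepsilon\to 0$ yields the announced formula for the ambiguity $(R^1-R^2)T$.
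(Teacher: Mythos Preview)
Your argument is correct and takes the same route as the paper's proof: compute $R^1_\varepsilon T-R^2_\varepsilon T$ directly, reduce it to $\langle T(\chi_2-\chi_1),P_m(\omega)\rangle+\langle T(\chi_2-\chi_1)_{\varepsilon^{-1}},I_m(\omega)\rangle$, and kill the second term using the weak homogeneity of $T$ together with the vanishing of $I_m(\omega)$ to order $m+1$ along $I$. One cosmetic slip: your displayed scaling identity should read $\langle T\psi_{\varepsilon^{-1}},\omega\rangle = \varepsilon^{s}\langle \varepsilon^{-s}T_\varepsilon,\psi\,\omega_\varepsilon\rangle$, with the extra factor $\varepsilon^{p}$ coming not from the identity itself but from the fact that $\varepsilon^{-p}\omega_\varepsilon$ (rather than $\omega_\varepsilon$) is the bounded family, since every component of a $k$-form carries at least $p=\sup(0,k-n)$ factors of $dh$; likewise in the second case the correct exponent is $s+m+1$ rather than $s+p+m+1$, as the $dh$-degree is already absorbed into the homogeneity of the Taylor remainder and should not be counted twice---but neither point affects the conclusion.
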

\begin{proof}
By definition, we have:
$$\left\langle R_\varepsilon^i T, \omega\right\rangle=\left\langle T(\chi_i-\chi_{i\varepsilon^{-1}}), I_m(\omega)\right\rangle+\left\langle T\left(1-\chi_i\right), \omega\right\rangle$$

 The only thing we have to do is to compute the difference $\left(R_\varepsilon^1-R_\varepsilon^2\right) T$. First notice that
$$ \left\langle T\left(1-\chi_1\right), \omega\right\rangle=\left\langle T\left(1-\chi_2\right), \omega\right\rangle+\left\langle T\left(\chi_2-\chi_1\right), \omega\right\rangle$$ $$=\left\langle T\left(1-\chi_2\right), \omega\right\rangle+\left\langle T\left(\chi_2-\chi_1\right), P_m(\omega)\right\rangle+\left\langle T\left(\chi_2-\chi_1\right), I_m(\omega)\right\rangle $$
thus 
$$  \left\langle R_\varepsilon^1 T, \omega\right\rangle=\left\langle T\left(1-\chi_1\right), \omega\right\rangle+\left\langle T(\chi_1-\chi_{1\varepsilon^{-1}}), I_m(\omega)\right\rangle$$ $$=\left\langle T\left(1-\chi_2\right), \omega\right\rangle+\left\langle T\left(\chi_2-\chi_1\right), P_m(\omega)\right\rangle+\left\langle T\left(\chi_2-\chi_1\right), I_m(\omega)\right\rangle+\left\langle T(\chi_1-\chi_{1\varepsilon^{-1}}), I_m(\omega)\right\rangle $$
$$=\left\langle T\left(1-\chi_2\right), \omega\right\rangle+\left\langle T\left(\chi_2-\chi_1\right), P_m(\omega)\right\rangle+\left\langle T(\chi_2-\chi_{1\varepsilon^{-1}}), I_m(\omega)\right\rangle $$
then computing the difference
$$ \left\langle \left(R_\varepsilon^1-R_\varepsilon^2\right) T, \omega\right\rangle=\left\langle R_\varepsilon^1T, \omega\right\rangle-\left\langle R_\varepsilon^2 T, \omega\right\rangle$$ $$=\left\langle T\left(1-\chi_2\right), \omega\right\rangle+\left\langle T\left(\chi_2-\chi_1\right), P_m(\omega)\right\rangle+\left\langle T(\chi_2-\chi_{1\varepsilon^{-1}}), I_m(\omega)\right\rangle$$ 
$$-\left\langle T(\chi_2-\chi_{2\varepsilon^{-1}}), I_m(\omega)\right\rangle-\left\langle T\left(1-\chi_2\right), \omega\right\rangle$$
$$=\left\langle T\left(\chi_2-\chi_1\right), P_m(\omega)\right\rangle+\left\langle T\left(\chi_2-\chi_1\right)_{\varepsilon^{-1}}, I_m(\omega)\right\rangle $$
 As in the proof of theorem (\ref{ambiguitythm}), we can take the limit $\varepsilon\rightarrow 0$ and we find  that the term $\left\langle T\left(\chi_2-\chi_1\right)_{\varepsilon^{-1}}, I_m(\omega)\right\rangle$ will vanish when $\varepsilon\rightarrow 0$.
\end{proof}
\paragraph{Changing $\rho$.} 
We say that $\chi$ is compatible with $\rho$ iff
for each $p\in I$, there is a neighborhood $V_p$ 
of $p$
and a local chart $(x,h):V_p\mapsto \mathbb{R}^{n+d}$ 
on this neighborhood
on which $\rho=h^j\frac{\partial}{\partial h^j}$,
$\chi=0$ when $\vert h\vert \geqslant b$
and $\chi=1$ when $\vert h\vert\leqslant a$ for some pair $0<a<b$.
Let $\rho_1,\rho_2$ be two 
Euler vector fields 
and $\chi$ which is \emph{compatible} with $\rho_1$ and $\rho_2$.
Let $R^i_\varepsilon,i=1,2$ be 
the corresponding regularization operators 
on $E_{s}\left(\mathcal{D}_k^\prime(M\setminus I)\right)$ 
defined as follows:
for $p=\sup(k-n,0)$. 
If $s+p\leqslant 0$,
let $m$ s.t. $-m-1<s\leqslant m$, 
the regularization operator $R^i$
corresponding to each $\rho_i,i=(1,2)$ is given by the formula 
\begin{equation}\left\langle R^i_\varepsilon T, \omega\right\rangle
=\left\langle T\left(\chi-e^{-\log\varepsilon\rho_i*}\chi\right),I_{im} \right\rangle 
+ \left\langle T\left(1-\chi\right), \omega\right\rangle
\end{equation}
where $\omega=P_{im}(\omega)+I_{im}(\omega),i=1,2$, $P_{im}(\omega)$ is the ``Taylor polynomial of order $m$''
of $\omega$ 
for the Euler vector field $\rho_i$ 
and $\lim_{\varepsilon\rightarrow 0} R^i_\varepsilon T=R^iT$ exists in $\mathcal{D}_k^\prime(M)$ and defines an extension $R^iT$ of $T$.
Otherwise, if $s+p>0$ then
$R^iT=\lim_{\varepsilon\rightarrow 0} T(1-e^{-\log\varepsilon\rho_i\star}\chi)$.

\begin{thm}
Let $T\in E_{s}\left(\mathcal{D}_k^\prime(M\setminus I)\right)$ 
and $p=\sup (0,k-n)$.
If $s+p>0$ then $R^1T=R^2T$.
If $s+p\leqslant 0$ let 
$m\in\mathbb{N}$ s.t. $-(m+1)<s\leqslant -m$, 
then
for any Euler vector field 
$\rho$ such that $\chi$ is \emph{compatible}
with $\rho$,
\begin{eqnarray}
\left\langle \left(R^1-R^2\right) T,\omega\right\rangle=\lim_{\varepsilon\rightarrow 0} \left\langle T\left(\chi-e^{-\log\varepsilon\rho\star}\chi\right), P_{2m}(\omega)-P_{1m}(\omega)\right\rangle.
\end{eqnarray}
\end{thm}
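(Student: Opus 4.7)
The plan is to separate the proof into two cases according to the sign of $s+p$, mirroring the structure of the definition of $R^i_\varepsilon$ itself. In the easy case $s+p>0$, no Taylor polynomial subtraction is performed, so $R^i T(\omega)=\lim_{\varepsilon\to 0}\langle T(1-e^{-\log\varepsilon\rho_i\star}\chi),\omega\rangle$. The current-theoretic analogue of Proposition \ref{scal1} shows that both $R^1 T$ and $R^2 T$ are extensions of $T$ lying in $E_s(\mathcal{D}^\prime_k(M))$, and the uniqueness statement of the removable singularity Theorem \ref{removsing}, transported to currents via the representation Theorem \ref{repsthm}, forces $R^1 T=R^2 T$.

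For the harder case $s+p\leqslant 0$, I would fix any Euler vector field $\rho$ compatible with $\chi$ and set $\chi^\varepsilon:=e^{-\log\varepsilon\rho\star}\chi$, $\chi_i^\varepsilon:=e^{-\log\varepsilon\rho_i\star}\chi$. The key algebraic identity is $I_{1m}(\omega)-I_{2m}(\omega)=P_{2m}(\omega)-P_{1m}(\omega)$, which follows at once from $\omega=P_{im}(\omega)+I_{im}(\omega)$. Adding and subtracting $\langle T(\chi-\chi^\varepsilon),I_{im}(\omega)\rangle$ inside each of the two terms $\langle T(\chi-\chi_i^\varepsilon),I_{im}(\omega)\rangle$ appearing in $R^1_\varepsilon T(\omega)-R^2_\varepsilon T(\omega)$ yields the decomposition
$$R^1_\varepsilon T(\omega)-R^2_\varepsilon T(\omega)=\langle T(\chi-\chi^\varepsilon),P_{2m}(\omega)-P_{1m}(\omega)\rangle+E_1(\varepsilon)-E_2(\varepsilon),$$
where $E_i(\varepsilon)=\langle T(\chi^\varepsilon-\chi_i^\varepsilon),I_{im}(\omega)\rangle$. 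Since the left-hand side does not depend on $\rho$, the limit in the claimed formula is automatically independent of the choice of compatible $\rho$.

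The main step consists in showing $E_i(\varepsilon)\to 0$. The function $\chi^\varepsilon-\chi_i^\varepsilon$ is uniformly supported in an annular region around $I$ shrinking at rate $\varepsilon$ (since $\chi$ is compatible with both $\rho$ and $\rho_i$), while $I_{im}(\omega)$ is a Taylor remainder of weighted homogeneous order $m+1$ for scaling by $\rho_i$. Mimicking the bound already obtained in the proof of Proposition \ref{ren2}, and using the representation $I_{im}(\omega)=\frac{1}{m!}\int_0^1 ds\,(1-s)^m(\partial_s)^{m+1}e^{\log s\rho_i\star}\omega$, one writes $E_i(\varepsilon)=\int\frac{d\lambda}{\lambda}\lambda^{m+1}\langle T(\chi^\varepsilon-\chi_i^\varepsilon),(\ldots)_\lambda\rangle$ and uses the $E_s$ bound on $T$ together with the compactness of the support at scale $\varepsilon$ to extract an estimate of the form $|E_i(\varepsilon)|\lesssim\varepsilon^{s+p+m+1}$; since $-m-1<s$ by hypothesis, $s+p+m+1>p\geqslant 0$, so the error vanishes in the limit.

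The main obstacle is precisely this uniform estimate, because $\rho$ and $\rho_i$ scale both the cutoff $\chi$ and the Taylor remainder $I_{im}(\omega)$ in different ways. The right tool is the conjugation Proposition \ref{propositionvariablefamily}, which allows to write $e^{-\log\varepsilon\rho_i\star}=\Phi_i(\varepsilon)^\star\circ e^{-\log\varepsilon\rho\star}$ for a smooth family of diffeomorphisms $\Phi_i(\varepsilon)$ fixing $I$; this converts $\chi^\varepsilon-\chi_i^\varepsilon$ into $(1-\Phi_i(\varepsilon)^\star)\chi^\varepsilon$, and Taylor-expanding in $\varepsilon$ via $\Phi_i(1)=\mathrm{Id}$ produces the extra factor of $\varepsilon$ needed to conclude. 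Passing to the limit $\varepsilon\to 0$ then delivers the announced identity.
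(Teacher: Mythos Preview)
Your decomposition is essentially the same as the paper's: both hinge on the algebraic identity $I_{1m}(\omega)-I_{2m}(\omega)=P_{2m}(\omega)-P_{1m}(\omega)$ and on the observation that, for a form $\varpi$ vanishing to order $m$ at $I$, the limit $\lim_{\varepsilon\to 0}\langle T(\chi-e^{-\log\varepsilon\rho\star}\chi),\varpi\rangle$ is independent of the Euler field $\rho$. The paper states this last fact as a black-box ``key point''; your error terms $E_i(\varepsilon)\to 0$ are exactly its content.

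There is, however, a genuine flaw in your sketch of why $E_i(\varepsilon)\to 0$. You write that $\chi^\varepsilon-\chi_i^\varepsilon=(1-\Phi_i(\varepsilon)^\star)\chi^\varepsilon$ and that ``Taylor-expanding in $\varepsilon$ via $\Phi_i(1)=\mathrm{Id}$ produces the extra factor of $\varepsilon$''. Both parts are wrong. First, the conjugation from Proposition~\ref{propositionvariablefamily} gives $S_i(\varepsilon)=S(\varepsilon)\circ\Phi_i(\varepsilon)$, hence $\chi_i^\varepsilon=\chi\circ S_i(\varepsilon)^{-1}=e^{-\log\varepsilon\rho\star}\bigl(\Phi_i(\varepsilon)^{-1\star}\chi\bigr)$, so the correct identity is
\[
\chi^\varepsilon-\chi_i^\varepsilon=e^{-\log\varepsilon\rho\star}\bigl(\chi-\Phi_i(\varepsilon)^{-1\star}\chi\bigr),
\]
i.e.\ the rescaling acts \emph{outside}, not inside. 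Second, and more seriously, the relevant limit is $\varepsilon\to 0$, where $\Phi_i(\varepsilon)\to\Psi_i$ (the conjugating diffeomorphism of Corollary~\ref{conjugcoro}), which is generically \emph{not} the identity. So $(1-\Phi_i(\varepsilon)^{-1\star})\chi$ does \emph{not} tend to zero, and no extra factor of $\varepsilon$ appears from this source.

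Fortunately none is needed. The correct argument is: set $g_\varepsilon:=\chi-\Phi_i(\varepsilon)^{-1\star}\chi$. Since $\Phi_i$ is smooth on $[0,1]$ and fixes $I$, $(g_\varepsilon)_{\varepsilon\in[0,1]}$ is a \emph{bounded} family in $C^\infty$, uniformly supported in a fixed annulus $\{a'\leqslant|h|\leqslant b'\}$ away from $I$. Then
\[
E_i(\varepsilon)=\bigl\langle T\,e^{-\log\varepsilon\rho\star}g_\varepsilon,\,I_{im}(\omega)\bigr\rangle
=\bigl\langle T_\varepsilon\, g_\varepsilon,\,(I_{im}(\omega))_\varepsilon\bigr\rangle.
\]
The form $I_{im}(\omega)$ vanishes to order $m$ at $I$; this is an intrinsic jet condition, independent of which $\rho_j$ was used to construct it, so $\varepsilon^{-(m+1)}(I_{im}(\omega))_\varepsilon$ remains bounded on the fixed compact $\mathrm{supp}\,g_\varepsilon$. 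Together with $T\in E_s$ (valid for any Euler by Lemma~\ref{localthmrho}) this gives $|E_i(\varepsilon)|\lesssim\varepsilon^{s+m+1}\to 0$, exactly the bound you claimed. The decay thus comes entirely from $T\in E_s$ and the vanishing order of $I_{im}(\omega)$, not from any smallness of $\Phi_i(\varepsilon)-\mathrm{Id}$.
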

Notice that in the conclusion 
of this theorem the vector field 
$\rho$ is chosen independently of 
$\rho_1,\rho_2$.

\begin{proof}
Before we prove our claim, we would like to give some important remarks. 
\begin{itemize}
\item First, no matter what Euler vector field $\rho_i$ we choose,
the Taylor remainder $I_{im}(\omega)$ \emph{always vanishes at order} $m$ on the submanifold $I$.
The key point is that if a smooth form $\varpi$ vanishes at order $m$ at $I$,
then the limit $\lim_{\varepsilon\rightarrow 0} \left\langle T\left(\chi-e^{-\log\varepsilon\rho\star}\chi\right), \varpi\right\rangle$
does not depend on the choice
of Euler vector field $\rho$
provided $\chi$ is $\rho$ admissible.
Hence by choosing some 
Euler vector field $\rho$ for which $\chi$ is $\rho$ admissible,
we still have
$$\forall i, \lim_{\varepsilon\rightarrow 0}\left\langle T (\chi-e^{-\log\varepsilon\rho_i}\chi), I_{im}(\omega)  \right\rangle=\lim_{\varepsilon\rightarrow 0}\left\langle T (\chi-e^{-\log\varepsilon\rho}\chi), I_{im}(\omega)  \right\rangle$$ $$=\lim_{\varepsilon\rightarrow 0}\left\langle T (\chi-\chi_{\varepsilon^{-1}}), I_{im}(\omega)  \right\rangle \text{ where } \chi_{\varepsilon^{-1}}=e^{-\log\varepsilon\rho}\chi.$$ 
\item Secondly, if we denote by $P_{im}(\omega),i=1,2$ (resp $I_{im}(\omega),i=1,2$) the ``Taylor polynomials'' (resp ``Taylor remainders'') associated with $\rho_i,i=1,2$, then
from $\omega=P_{1m}(\omega)+I_{1m}(\omega)=P_{2m}(\omega)+I_{2m}(\omega)$ we deduce that $I_{1m}(\omega)-I_{2m}(\omega)=P_{2m}(\omega)-P_{1m}(\omega)$, hence $P_{2m}(\omega)-P_{1m}(\omega)$ depends only on some finite jet of $\omega,\rho_1,\rho_2$ and
vanishes at order $m$ at $I$ 
(it is in general \textbf{not a polynomial in 
arbitrary local charts}).
\end{itemize}

We can now compute $\left(R^1-R^2\right) T$: 
$$\left\langle \left(R^1-R^2\right) T, \omega\right\rangle=\lim_{\varepsilon\rightarrow 0} \left\langle T(\chi-\chi_{\varepsilon^{-1}}), I_{1m}(\omega)-I_{2m}(\omega)\right\rangle .$$
Using $I_{1m}(\omega)-I_{2m}(\omega)=P_{2m}(\omega)-P_{1m}(\omega)$, we finally get: 
$$\left\langle \left(R^1-R^2\right) T, \omega\right\rangle=\lim_{\varepsilon\rightarrow 0} \left\langle T(\chi-\chi_{\varepsilon^{-1}}),P_{2m}(\omega)-P_{1m}(\omega)\right\rangle $$ 
where 
the above limit makes sense since 
$P_{2m}(\omega)-P_{1m}(\omega)$ 
vanishes 
at order $m$ on the submanifold $I$.
\end{proof}

\subsection{The geometric residues.}

\subsubsection{The residues and the coboundary $d$ of currents.}
  We want to describe the ambiguities of the restriction-extension operation on closed currents $T\in H_*\left(\mathcal{D}_*^\prime(M\setminus I),d \right)$ defined on $M\setminus I$ and on exact currents $dT\in B_*\left(\mathcal{D}_*^\prime(M\setminus I),d \right)$ defined on $M\setminus I$.
In other words
one could ask is how
does our extension
procedure
behaves when applied to
closed currents ?
The notion
of \textbf{residue}
(following \cite{Griffiths} and Eells--Allendoerfer \cite{Eells})
that we define below answers this question,
$\mathfrak{Res}[T]$ is defined as the solution of the chain homotopy equation:
\begin{eqnarray}
\mathfrak{Res}[T]=d RT-Rd T.
\end{eqnarray}

Recall $E_{s}\left(\mathcal{D}_k^\prime(M\setminus I)\right)$ 
is the space of $k$-currents in 
$\mathcal{D}^\prime_k(M\setminus I)$ 
which are weakly homogeneous of degree $s$
and we work on $M\setminus I$
where $\dim M=n+d$ and $\dim I=n$.
\begin{thm}
Let $T\in E_{s}\left(\mathcal{D}_k^\prime(M\setminus I)\right)$, 
and $p=\sup(0,k-n-1)$.
If $s+p>0 $
then
$\mathfrak{Res}[T]=0$. 
%is a \textbf{current supported on }$I$ given by the formula
%\begin{equation}
%\forall \omega\in \mathcal{D}^{k-1}(M), \mathfrak{Res}[T](\omega)=-\lim_{\varepsilon\rightarrow 0}\left\langle  d\chi,T_{\varepsilon}\wedge\omega_{\varepsilon}\right\rangle
%\end{equation}
\end{thm}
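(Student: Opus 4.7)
The plan is to show that $dRT = RdT$ as currents in $\mathcal{D}'_{k-1}(M)$ under the hypothesis $s+p>0$. First, since the pullback $e^{\log\lambda\rho*}$ commutes with the exterior derivative, $dT \in E_s(\mathcal{D}'_{k-1}(M\setminus I))$ is weakly homogeneous of the same degree $s$. Under the hypothesis, $s+\sup(0,k-n)\geq s+p>0$, so both $RT$ (as a $k$-current) and $RdT$ (as a $(k-1)$-current) fall into the ``no-counterterm'' regime of Definition \ref{meyercurrents} and are given, for test forms $\varpi$ and $\omega$ of appropriate degrees, by
$$\langle RT,\varpi\rangle = \lim_{\varepsilon\to 0}\langle T,(1-e^{-\log\varepsilon\rho*}\chi)\varpi\rangle,\qquad \langle RdT,\omega\rangle = \lim_{\varepsilon\to 0}\langle dT,(1-e^{-\log\varepsilon\rho*}\chi)\omega\rangle.$$

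Next, I would reduce the residue to a ``boundary-term'' integral by integration by parts. Applying the Leibniz rule to $(1-e^{-\log\varepsilon\rho*}\chi)\omega$ and the duality $\langle dT,\cdot\rangle = \langle T, d(\cdot)\rangle$ yields
$$\langle RdT,\omega\rangle = \lim_{\varepsilon\to 0}\bigl[\langle T,(1-e^{-\log\varepsilon\rho*}\chi)d\omega\rangle - \langle T,d(e^{-\log\varepsilon\rho*}\chi)\wedge\omega\rangle\bigr],$$
so that
$$\langle \mathfrak{Res}[T],\omega\rangle = \langle dRT-RdT,\omega\rangle = \lim_{\varepsilon\to 0}\langle T, d(e^{-\log\varepsilon\rho*}\chi)\wedge\omega\rangle.$$
The task thus reduces to showing this limit vanishes for every test $(k-1)$-form $\omega$.

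To estimate this boundary term, I would use the scaling identity: because pullback commutes with both $d$ and the wedge product, $\langle T, d(e^{-\log\varepsilon\rho*}\chi)\wedge\omega\rangle = \langle T_\varepsilon, d\chi\wedge\omega_\varepsilon\rangle = \varepsilon^s\langle \varepsilon^{-s}T_\varepsilon, d\chi\wedge\omega_\varepsilon\rangle$. Localizing by a partition of unity subordinate to charts around $I$ in which $\rho = h^j\partial_{h^j}$, a test $(k-1)$-form decomposes as $\omega = \sum_{|I|+|J|=k-1}\omega_{IJ}\,dx^I\wedge dh^J$; the constraint $|I|\leq n$ forces $|J|\geq p$, so $\omega_\varepsilon = \sum \varepsilon^{|J|}\omega_{IJ}(x,\varepsilon h)\,dx^I\wedge dh^J = O(\varepsilon^p)$ in every $C^\ell$ seminorm. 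Since $d\chi$ has compact support in an annular neighborhood of $I$, the family $\{d\chi\wedge\omega_\varepsilon\}_{\varepsilon\in(0,1]}$ is uniformly supported in a fixed compact set $K$ and satisfies a bound of the form $C\,\varepsilon^p$ in every $C^\ell$ seminorm. Combining with the Banach--Steinhaus estimate on the weakly bounded family $\{\varepsilon^{-s}T_\varepsilon\}$ (which is strongly bounded by the appendix of Chapter 1) yields $|\langle T, d(e^{-\log\varepsilon\rho*}\chi)\wedge\omega\rangle|\leq C\,\varepsilon^{s+p}\to 0$, which proves $\mathfrak{Res}[T]=0$.

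The main technical point is the exact matching of scalings in the last step. The compensating factor $\varepsilon^{|J|}\geq \varepsilon^p$ in $\omega_\varepsilon$ is forced purely by the geometric constraint that $I$ has dimension $n$ in an $(n+d)$-manifold, and this is what balances the potential blow-up $\varepsilon^s$ of $T_\varepsilon$. Their product is integrable at $\varepsilon=0$ precisely when $s+p>0$, so the numerical threshold is sharp; when it fails, one must introduce counterterms and the residue generically becomes nontrivial, which is exactly the anomaly phenomenon studied in the rest of the chapter.
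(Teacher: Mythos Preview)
Your proof is correct and follows essentially the same route as the paper: reduce $\mathfrak{Res}[T]$ to the boundary term $\lim_{\varepsilon\to 0}\langle T,\, d\chi_{\varepsilon^{-1}}\wedge\omega\rangle$ via Leibniz, then rescale and use the $E_s$ hypothesis together with the degree constraint on $\omega$ to bound it by $C\varepsilon^{s+p}\to 0$. The paper's proof is simply terser---it asserts the bound $|\langle d\chi_{\varepsilon^{-1}},T\wedge\omega\rangle|\leqslant C\varepsilon^{s+p}$ without spelling out (as you do) that the factor $\varepsilon^p$ comes from the constraint $|J|\geqslant k-1-n$ on the $dh$-part of a $(k-1)$-form.
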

\begin{proof}
The key remark is that
$dT\in E_{s}\left(\mathcal{D}_{k-1}^\prime(M\setminus I)\right)$
since $d$ is scale invariant.
The residue equals $d(RT)-R(dT)$ by definition. If $s+p>0$ then
by definition of $R$ (\ref{meyercurrents}): 
$$\left\langle d(R T)-R(dT),\omega\right\rangle=
\lim_{\varepsilon\rightarrow 0}\left\langle d((1-\chi_{\varepsilon^{-1}})T)-(1-\chi_{\varepsilon^{-1}})(dT),\omega\right\rangle$$
since there are no counterterms to subtract
$$=-\lim_{\varepsilon\rightarrow 0}\left\langle  d\chi_{\varepsilon^{-1}},T\wedge\omega\right\rangle =0.$$
Since $\vert\left\langle  d\chi_{\varepsilon^{-1}},T\wedge\omega\right\rangle\vert\leqslant C\varepsilon^{s+p}$ for some $C>0$ by the hypothesis of homogeneity on $T$ and the degree of $T$.
%The locality of $\mathfrak{Res}$ is simple to establish, since if we assume we work in local coordinates $(x,h)$ and
%$\text{supp }d\chi\subset \{a\leqslant h\leqslant b\}$, then $\left\langle  d\chi,T_{\varepsilon}\wedge\omega_{\varepsilon}\right\rangle$ depends only on finite jets of $\omega$ on $\{\varepsilon a\leqslant h\leqslant \varepsilon b\}$.
\end{proof}
Let us give the 
fundamental example 
of residue from Griffiths--Harris see \cite{Griffiths} p.~367 
and Laurent Schwartz \cite{Schwartz} p.~345-347.
\begin{ex}
Let $H$ be the Heaviside function on $\mathbb{R}$. $H$ is a smooth closed $0$-form on $\mathbb{R}\setminus \{0\}$. The local integrability around $0$ guarantees it extends in a unique way as a current denoted $RH\in \mathcal{D}_1^\prime\left(\mathbb{R}\right)$. 
By integration by parts and by the fact that $d H|_{\mathbb{R}\setminus \{0\}}=0$ since $H$ is \textbf{closed}, it is immediate that
$$ d RH -R \underset{=0}{\underbrace{d H}}=d RH = \delta_0(x) dx  $$ 
So the current $\delta_0(x) dx\in \mathcal{D}_0^\prime(\mathbb{R})$ 
is the residue of the Heaviside function $H$ 
which is closed on $\mathbb{R}\setminus \{0\}$.
\end{ex}
In the above example, 
the residue measures the jump at $0$.
However in the case 
of renormalization theory, 
our residues must generalize the 
``classical'' 
notion of residue to
take into account
\textbf{more singular} distributions (see \cite{Griffiths} p.369,371).
\begin{thm}\label{residuethm}
Let $T\in E_{s}\left(\mathcal{D}_k^\prime(M\setminus I)\right)$, 
and $p=\sup(0,k-n-1)$.
If $s+p\leqslant 0$
let for $m\in\mathbb{N}$ s.t. $-m-1<s\leqslant -m$,
then
$\mathfrak{Res}$ is a \textbf{current supported on }$I$ given by the formula
\begin{equation}
\forall \omega\in \mathcal{D}^{k-1}(M), \mathfrak{Res}[T](\omega)=(-1)^{n-k-1}\left\langle T,d\chi\wedge P_m(\omega)\right\rangle.
\end{equation}
\end{thm}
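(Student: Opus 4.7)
The plan is to compute $dR_\varepsilon T - R_\varepsilon dT$ directly from Definition \ref{meyercurrents} and pass to the limit $\varepsilon\to 0$. Two algebraic ingredients play a central role. First, Cartan's formula gives $[d,L_\rho]=0$, and since the homogeneous component $\omega_n$ of the Taylor decomposition is characterised as the part of $\omega$ of weight $n$ under $L_\rho$, the exterior derivative preserves this decomposition, so
\[
d\circ P_m=P_m\circ d,\qquad d\circ I_m=I_m\circ d.
\]
Second, I would use the Leibniz rule $d(fS)=df\wedge S+f\,dS$ for a smooth function $f$ and a current $S$, together with the duality $\langle S,d\omega\rangle=\pm\langle dS,\omega\rangle$ whose sign is fixed by the form-degree of $S$.

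For $\omega\in\mathcal{D}^{k-1}(M)$, I would write $\langle dR_\varepsilon T,\omega\rangle=\pm\langle R_\varepsilon T,d\omega\rangle$, unpack $R_\varepsilon T$, commute $d$ past $I_m$, and transfer $d$ from the test form onto the currents $(\chi-\chi_{\varepsilon^{-1}})T$ and $(1-\chi)T$ via the duality. This produces two kinds of contributions. The bulk Leibniz terms, namely $(\chi-\chi_{\varepsilon^{-1}})\,dT$ paired with $I_m\omega$ and $(1-\chi)\,dT$ paired with $\omega$, reassemble exactly into $\langle R_\varepsilon(dT),\omega\rangle$; here one uses that $dT$ is again weakly homogeneous of degree $s$ because $d$ commutes with $e^{\log\lambda\rho\star}$, and that the integer $m$ in $-m-1<s\le -m$ depends only on $s$, so the same extension formula and the same Taylor order apply to $dT$ as to $T$. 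The remaining ``boundary'' cross-terms $d(\chi-\chi_{\varepsilon^{-1}})\wedge T$ and $-d\chi\wedge T$, after using the identity $\omega-I_m\omega=P_m\omega$, collapse (once the form-degree sign conventions are tracked) to
\[
(-1)^{n-k-1}\bigl\{\langle T,d\chi\wedge P_m\omega\rangle+\langle T,d\chi_{\varepsilon^{-1}}\wedge I_m\omega\rangle\bigr\}.
\]

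The only real analytic obstacle is then to show that the $\varepsilon$-dependent boundary piece $\langle T,d\chi_{\varepsilon^{-1}}\wedge I_m\omega\rangle$ vanishes as $\varepsilon\to 0$. Since $d\chi_{\varepsilon^{-1}}=e^{-\log\varepsilon\rho\star}(d\chi)$ is supported in an annular region of transverse size $O(\varepsilon)$ around $I$, while $I_m\omega$ vanishes to order $m+1$ along $I$, rescaling brings out a factor $\varepsilon^{s+d+m+1}$ multiplying the pairing of the bounded family $\lambda^{-s}T_\lambda$ against a bounded family of scaled test forms. The hypothesis $-m-1<s$ gives $s+d+m+1>0$, so the term vanishes. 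This is a direct transcription to currents of the estimate already used in Proposition \ref{ren2}.

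Passing to the limit yields the claimed formula. That $\mathfrak{Res}[T]$ is supported on $I$ is then immediate: the right-hand side depends on $\omega$ only through $P_m\omega$, i.e.\ only through the $m$-jet of $\omega$ along $I$, so it annihilates any $\omega$ compactly supported in $M\setminus I$. This matches the conceptual fact that $R$ is a right inverse to restriction on $M\setminus I$, whence $(dR-Rd)T$ vanishes identically there.
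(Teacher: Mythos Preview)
Your approach is essentially the paper's own: compute $dR_\varepsilon T-R_\varepsilon dT$, use $[d,I_m]=0$ and the Leibniz rule, isolate the surviving boundary term $\langle T,d\chi\wedge P_m\omega\rangle$, and kill the $\varepsilon$-dependent piece by a scaling estimate.

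One genuine gap: you assert that ``the same extension formula and the same Taylor order apply to $dT$ as to $T$'', but while the Taylor order $m$ indeed depends only on $s$, the \emph{choice of formula} in Definition~\ref{meyercurrents} depends on the sign of $s+\sup(0,k-n)$, and $T\in\mathcal D_k'$ while $dT\in\mathcal D_{k-1}'$. There is a borderline regime $s+\sup(0,k-n)>0$ yet $s+\sup(0,k-n-1)\le 0$ in which $R_\varepsilon T$ is the unrenormalized formula while $R_\varepsilon(dT)$ requires the Taylor subtraction. The paper handles this case separately: here necessarily $k\ge n+1$ and $m=p=k-n-1$, and since a $k$-form $d\omega$ carries at least $k-n=p+1$ factors of $dh$ one has $P_p(d\omega)=0$, i.e.\ $d\omega=I_p(d\omega)$; this lets one rewrite $\langle R_\varepsilon T,d\omega\rangle=\langle T(1-\chi),d\omega\rangle+\langle T(\chi-\chi_{\varepsilon^{-1}}),I_p(d\omega)\rangle$ in the same shape as the first case, after which your computation applies verbatim. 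You should address this case.

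A minor slip: the scaling exponent is $s+m+1$, not $s+d+m+1$. In the current setting scaling of test forms is by pull-back (no Jacobian $\lambda^{-d}$), so the codimension $d$ from Proposition~\ref{ren2} does not appear. This does not affect the conclusion since $-m-1<s$ already gives $s+m+1>0$.
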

\begin{proof} 
Let $T$ be a current in $\mathcal{D}^\prime_k$ and $\omega\in \mathcal{D}^{k-1}(M)$ a $k-1$ test form. We want to compute the difference $\left\langle d \left(R_\varepsilon T\right),\omega \right\rangle-\left\langle  \left(R_\varepsilon d T\right),\omega \right\rangle$.
%By duality, we will think of the operator $R_\varepsilon$ as acting on the test form $\omega$ as 
% $$R_\varepsilon \omega =(1-\chi)\omega  + \left(\chi-e^{-\log\varepsilon\rho*}\chi\right) \frac{1}{m!}\int_\varepsilon^1 dt(1-t)^m \left(\frac{d}{dt} \right)^{m+1}\left(e^{\log t\rho*}\omega\right) $$ 
There are two cases for this theorem.

 Either both $T$ and $dT$ need a renormalization. We first treat this case.
By definition of the coboundary $d$ of a current (\cite{Schwartz}, \cite{Giaquinta}),
we find that $$\left\langle d \left(R_\varepsilon T\right),\omega \right\rangle-\left\langle  \left(R_\varepsilon d T\right),\omega \right\rangle=(-1)^{n-k-1}\left\langle R_\varepsilon T,   d\omega \right\rangle -\left\langle  R_\varepsilon dT ,\omega \right\rangle.$$
On the one hand, we have: 
$$\left\langle R_\varepsilon T,   d\omega \right\rangle=\left\langle T,(1-\chi)d\omega \right\rangle+\left\langle T\left(\chi-e^{-\log\varepsilon\rho*}\chi\right), I_m(d\omega) \right\rangle$$ $$=\left\langle T,(1-\chi)d\omega \right\rangle+\left\langle T\left(\chi-e^{-\log\varepsilon\rho*}\chi\right) , dI_m(\omega) \right\rangle $$
since $\frac{1}{m!}\int_\varepsilon^1 dt(1-t)^m \left(\frac{d}{dt} \right)^{m+1}\left(e^{\log t\rho*}d\omega\right)=d\frac{1}{m!}\int_\varepsilon^1 dt(1-t)^m \left(\frac{d}{dt} \right)^{m+1}\left(e^{\log t\rho*}\omega\right) $
because $d$ commutes with the pull-back operator $e^{\log t\rho*}$. 
We hence notice the important fact that if we view $I_m$ and $P_m$ as projections
in $Hom\left(\mathcal{D}^\star(M), \mathcal{D}^\star(M)\right)$, 
then they \textbf{commute} with $d$.
On the other hand:
$$\left\langle  R_\varepsilon dT, \omega \right\rangle=\left\langle dT,\left(1-\chi\right) \omega\right\rangle +\left\langle dT,\left(\chi-e^{-\log\varepsilon\rho*}\chi\right) I_m(\omega)\right\rangle ,$$
then following the definition of the coboundary $d$ of a current, 
we differentiate the test form: 
$$\left\langle T, d\left(\left(1-\chi\right) \omega\right)\right\rangle +\left\langle T,d\left(\left(\chi-e^{-\log\varepsilon\rho*}\chi\right) I_m(\omega)\right) \right\rangle=\left\langle T,\left(1-\chi\right)d \omega\right\rangle -\left\langle T,\left(d\chi\right)\wedge \omega\right\rangle$$ $$ +\left\langle T,\left(d\chi\right)\wedge I_m(\omega) \right\rangle - \left\langle T,\left(d\chi\right)_{\varepsilon^{-1}}\wedge I_m(\omega) \right\rangle + \left\langle T,\left(\chi-e^{-\log\varepsilon\rho*}\chi\right) dI_m(\omega)\right) .$$
Thus  
$$(-1)^{n-k-1}\left\langle R_\varepsilon dT, \omega\right\rangle=\left\langle T,\left(1-\chi\right)d \omega\right\rangle - \left\langle T,\left(d\chi\right)\wedge P_m(\omega)\right\rangle$$
$$- \left\langle T,\left(d\chi\right)_{\varepsilon^{-1}}\wedge I_m(\omega) \right\rangle + \left\langle T,\left(\chi-e^{-\log\varepsilon\rho*}\chi\right) dI_m(\omega)\right)  $$
where $\omega=P_m(\omega)+I_m(\omega)$ by the Taylor formula. 
Then we find:
$$\left\langle dR_\varepsilon T,   \omega \right\rangle-\left\langle R_\varepsilon dT,\omega \right\rangle=(-1)^{n-k-1}\left(\left\langle T,\left(d\chi\right)\wedge P_m(\omega)\right\rangle + \left\langle T,\left(d\chi\right)_{\varepsilon^{-1}}\wedge I_m(\omega) \right\rangle\right).$$
Now notice that 
$$\left\langle T,\left(d\chi\right)_{\varepsilon^{-1}}\wedge I_m(\omega) \right\rangle=\left\langle T, \left(e^{-\log\varepsilon\rho*}d\chi\right)\wedge I_m(\omega) \right\rangle$$ 
$$ =\left\langle e^{\log\varepsilon\rho*}T,\left(d\chi\right)\wedge e^{\log\varepsilon\rho*}I_m(\omega) \right\rangle =\left\langle T_\varepsilon,\left(d\chi\right)\wedge I_m(\omega)_{\varepsilon} \right\rangle$$
and the above term 
satisfies the following estimate:
$$\exists C>0, \vert\left\langle T_\varepsilon,d\chi\wedge I_m(\omega)_{\varepsilon} \right\rangle\vert\leqslant C\varepsilon^{s+m+1}\underset{\varepsilon\rightarrow 0}{\rightarrow} 0$$
since $-m-1<s$, $T$ is weakly homogeneous
of degree $s$ and
the family of test
forms $d\chi\wedge I_m(\omega)_{\varepsilon},\varepsilon\in[0,1]$
is bounded.
Thus 
$$\lim_{\varepsilon\rightarrow 0}\left\langle  \left(d\circ R_\varepsilon 
- R_\varepsilon\circ d\right) T, \omega \right\rangle= (-1)^{n-k-1}\left\langle T,\left(d\chi\right)\wedge P_m(\omega)\right\rangle .$$
Finally, we find
$$\mathfrak{Res}[T](\omega)=(-1)^{n-k-1}\left\langle T,d\chi\wedge P_m(\omega)\right\rangle.$$

 Either
$T$ is s.t. $s+\sup (0,k-n)>0$ 
thus $RT$ does not need a renormalization 
and $s+\sup (k-n-1,0)\leqslant 0$ 
which implies that the definition 
of the extension $RdT$ 
needs a renormalization
and that $k-n-1\geqslant 0$, thus
$p=k-n-1$.
Actually since $-p-1<s\leqslant -p$,
we must subtract a counterterm
$P_p(\omega)$ to the
$k-1$ form $\omega$
to define the extension: $RdT$. 
The key fact
is to notice that
$d\omega$
is polynomial 
in $dh$ of degree at least 
$p+1$
thus $d\omega= I_p(d\omega)=d\omega-P_p(d\omega)$
and $$\left\langle R_\varepsilon T,d\omega\right\rangle=\left\langle (1-\chi)T,d\omega\right\rangle 
+ \left\langle (\chi-\chi_{\varepsilon^{-1}})T,I_p(\omega)\right\rangle$$
and we are reduced to the first case.
\end{proof}

We give the most fundamental example illustrative of our approach
\begin{ex}
We set $T=\frac{1}{\vert x\vert}$ and we will show how to compute 
the residue for this simple example. 
$RT$ is defined by the formula $\left\langle RT, \varphi dx\right\rangle=\int_{-\infty}^\infty  \frac{1}{\vert x\vert}\chi(x) (\varphi(x)-\varphi(0))dx + \int_{-\infty}^\infty  \frac{1}{\vert x\vert}(1-\chi(x))\varphi(x)dx$.
The residue is given by the simple formula
$$\mathfrak{Res}[\frac{1}{\vert x\vert}]=-\left(\int_{-\infty}^\infty  \frac{1}{\vert x\vert}(\partial_x\chi)(x)dx\right) \delta_0 .$$ 
\end{ex}
We give a second example 
which illustrates the limit
case where $RT$
does not need a renormalization
but $RdT$ does.
\begin{ex}
Let us work in $\mathbb{R}^d$ and $n=0$.
Let $T$ be a $d-1$ form
in $\mathbb{R}^d\setminus \{0\}$
which is homogeneous
of degree $0$, 
i.e. $T\in E_0(\mathcal{D}^{\prime}_1(\mathbb{R}^d\setminus \{0\}))$, 
then
$p=\sup(0,1-0)=1$ and $s+1=0+1>0$ thus
$RT$ does not 
need a renormalization.
$dT$ is a $d$ form
which is still 
homogeneous of degree $0$ but
$dT \in E_0(\mathcal{D}^{\prime}_0(\mathbb{R}^d\setminus \{0\}))$ thus $s+0=0+0\leqslant 0$
and $dT$
needs a renormalization 
with subtraction
of the form $\omega_0$.
$$\left\langle RdT,\omega\right\rangle =\lim_{\varepsilon\rightarrow 0}\left\langle dT,(1-\chi_{\varepsilon^{-1}})
\omega\right\rangle-\left\langle dT,(\chi-\chi_{\varepsilon^{-1}})
\omega_0\right\rangle
$$
but notice that
$\left\langle dT,
(\chi-\chi_{\varepsilon^{-1}})
\omega_0\right\rangle
=0$ by scale invariance
of $\omega_0$ and $dT$
thus in this 
example
the 
\textbf{counterterm vanishes}.
%$\left\langle RT,d\omega\right\rangle
%=\left\langle T(1-\chi_{\varepsilon^{-1}}),d\omega\right\rangle
%=\left\langle T(1-\chi_{\varepsilon^{-1}}),d\omega\right\rangle
%-\left\langle T(\chi-\chi_{\varepsilon^{-1}}),(d\omega)_0\right\rangle$
%since $d\omega_0=0$ !
Finally, 
the residue
satisfies the simple equation:
$$\lim_{\varepsilon\rightarrow 0}
\left\langle 
d\left(T(1-\chi_{\varepsilon^{-1}})\right),\omega \right\rangle -
\left\langle dT,(1-\chi_{\varepsilon^{-1}})\omega \right\rangle=\left\langle T , \omega_0d\chi \right\rangle.$$
\end{ex}
% Using the result of proposition (\ref{momentsprop}) and following the theorem (\ref{repsthm}), let us represent Res by the formula:
%\begin{eqnarray} 
%\mathfrak{Res}[T]=\sum c_{\alpha,I}(Res[T]) D^\alpha\delta_I\wedge  dh^I
%\\c_{\alpha,I}(Res[T])= \pi_*\left(T\wedge d\chi \wedge\frac{h^\alpha}{\alpha!} \left(i_{\frac{\partial}{\partial h^I}}\lrcorner dh^d\right)\right)
%\end{eqnarray}
%where the elements $(c_{\alpha,I}(Res[T]))_{\alpha,I}\in \mathcal{D}^\prime_*(I)$ are the \textbf{moments} of Res. 
For $T$ a closed current in $ E_{s}(\mathcal{D}^\prime_k(M\setminus I))$, 
we associated
a current $\mathfrak{Res}[T]\in \mathcal{D}^\prime_*(M)$ supported on $I$.
If $T$ is closed, what can be said about $\mathfrak{Res}[T]$?
\begin{prop}\label{exact}
Let $V$ be some neighborhood of $I$, 
$\pi:V\setminus I\mapsto I$ a submersion 
and $T\in  E_{s}(\mathcal{D}^\prime_k(V\setminus I))$.
If $T\in H_k\left(\mathcal{D}_*\left(V\setminus I\right),d \right)$ is a cycle in the complex of currents and $\pi$ is proper
on the support of $T$ 
then $\mathfrak{Res}[T]\in B_*\left(\mathcal{D}^\prime(M)\right)$.
\end{prop}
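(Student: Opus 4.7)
The plan is to exploit the chain homotopy identity that defines the residue and the hypothesis $dT=0$, which together trivialize the argument up to a mild globalization step. I would first recall, by the very definition of the residue operator,
\[
\mathfrak{Res}[T] \;=\; d\,R T - R(dT).
\]
The hypothesis is that $T$ is a cycle in $\mathcal{D}^\prime_k(V\setminus I)$, i.e.\ $dT = 0$ as a current on $V\setminus I$. Since the extension operator $R$ of Definition~\ref{meyercurrents} is linear and maps the zero current to the zero current, we immediately get $R(dT) = 0$, and therefore $\mathfrak{Res}[T] = d\,RT$, which by inspection exhibits $\mathfrak{Res}[T]$ as a coboundary.

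Before concluding, I would check the two small prerequisites that make this identity meaningful. First, one needs $dT$ to lie in a space to which $R$ applies. This is immediate: the exterior differential $d$ is a first-order operator compatible with the Euler scaling in the sense that $T\in E_s(\mathcal{D}^\prime_k(V\setminus I))$ implies $dT \in E_s(\mathcal{D}^\prime_{k-1}(V\setminus I))$ (since pull-back by $e^{\log\lambda\rho\star}$ commutes with $d$, as already used in the proof of Theorem~\ref{residuethm}). Second, one needs $RT$ to be well-defined as a current on all of $M$, not only on $V$. Here is where the properness of $\pi:V\setminus I\to I$ on $\mathrm{supp}(T)$ enters: it guarantees that when one patches the local extensions with a partition of unity subordinate to a cover of $I$, the resulting family is locally finite on $M$, so that the recipe of Definition~\ref{meyercurrents} globalizes to a genuine element $RT \in \mathcal{D}^\prime_k(M)$ whose restriction to $V\setminus I$ agrees with $T$.

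Putting the two observations together, $\mathfrak{Res}[T] = d\,RT$ is an exact $(k-1)$-current on $M$, which is precisely the statement $\mathfrak{Res}[T]\in B_{k-1}(\mathcal{D}^\prime(M))$. The only genuinely non-cosmetic step is the globalization in the previous paragraph; the rest is a one-line consequence of the chain homotopy relation. I expect the main obstacle in writing this up cleanly to be being careful about the passage from $V$ to $M$, in particular making sure that one can choose the cut-off $\chi$ and the Euler vector field $\rho$ used in the construction of $R$ globally on a neighborhood of $I\cap \mathrm{supp}(T)$ so that $RT$ is a \emph{single} current on $M$ whose differential we can identify with the residue computed fiber-by-fiber via the formula of Theorem~\ref{residuethm}.
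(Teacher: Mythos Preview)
Your proposal is correct and follows essentially the same argument as the paper: since $dT=0$, the chain homotopy identity $\mathfrak{Res}[T]=dRT-R(dT)$ reduces to $\mathfrak{Res}[T]=d(RT)$, so the residue is exact. The paper's proof is in fact just this one line; your additional discussion of why $RT$ is a well-defined current on $M$ (via the properness hypothesis) is more careful than the paper itself, which simply asserts the conclusion.
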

\begin{proof}
We first notice that if $T$ is closed then 
$$dRT-R\underset{=0}{\underbrace{dT}}=
d\left(RT\right)=\mathfrak{Res}[T]$$ 
implies  
$\mathfrak{Res}[T]\in B_*\left(\mathcal{D}_{*k}(M),d \right)$ is an exact current.
\end{proof}
Can we relate $\mathfrak{Res}[T]\in \mathcal{D}_\star^\prime(M)$ 
with a current in $\mathcal{D}_*^\prime(I)$ 
in the spirit of the representation theorem (\ref{repsthm})?
The answer is yes but
the naive idea  
to ``restrict'' $\mathfrak{Res}[T]$ 
to the submanifold $I$ does not make sense!
We need another idea which is 
explained in the following example.
\begin{ex}
Let $\delta(h) d^dh$ be the current supported by the point $0$. In this case, $I=\{0\}\subset \mathbb{R}^{d}$.
Then the corresponding current of $\mathcal{D}^\prime(I)$ is just the function $1$, and it can be recovered
by integrating over the ``fiber'' $\mathbb{R}^{d}$, $1=\int_{\mathbb{R}^d} \delta(h) d^dh$. 
\end{ex}
Let $N(I\subset M)$ be the normal bundle of $I$ in $M$. We can identify the closed smooth forms in $H^\star(V,d)$, which are supported in some neighborhood $V$ of $I$ which is \textbf{homotopy retract} to $I$, with
the closed smooth forms in 
$H_v^\star(N(I\subset M),d)$
which have 
compact vertical support 
(see \cite{Bott-Tu} for more on these forms).
The proof is a straightforward application
of the tubular neighborhood theorem
which gives a diffeomorphism
between a neighborhood 
of the zero section 
of $N(I\subset M)$ and $V$ 
and the fact 
that this diffeomorphism  
induces an isomorphism
in cohomology $H_v^\star(N(I\subset M),d)\simeq H^\star(V,d)$.
We denote by $i$ the embedding $i:I\hookrightarrow M$.
For any submersion $\pi:V\setminus I\mapsto I$ and any
current $T\in\mathcal{D}_\star^\prime(V)$ s.t. $\pi$
is proper on its support, the push-forward
$\pi_\star T $ is defined by the formula
$$\forall\omega\in \mathcal{D}(I), \left\langle \pi_\star T,\omega\right\rangle_I=\left\langle T,\pi^\star\omega\right\rangle_M .$$ 
\begin{thm}
Let $V$ be some neighborhood of $I$, 
$\pi:V\setminus I\mapsto I$ a submersion 
and $T\in  E_{s}(\mathcal{D}^\prime_k(V\setminus I))$.
If $T\in H_k\left(\mathcal{D}_*\left(V\setminus I\right),d \right)$ is a cycle in the complex of currents and $\pi$ is proper
on the support of $T$ 
then the push-forward
$$\pi_{\star}\left(\mathfrak{Res}[T]\right)\in B_\star(\mathcal{D}^\prime(I),d).$$
In particular, the current $\mathfrak{Res}[T]\in \mathcal{D}^\prime(M)$ 
is represented by the push-forward of $\pi_{\star}\left(\mathfrak{Res}[T]\right)$:
$\mathfrak{Res}[T]=i_\star\left(\pi_{\star}\left(\mathfrak{Res}[T]\right)\right)$.
\end{thm}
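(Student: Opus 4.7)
The plan is to combine three ingredients: Proposition~\ref{exact}, the explicit formula for the residue from Theorem~\ref{residuethm}, and the structure theorem for currents supported on $I$ (Theorem~\ref{repsthm}). First, since $T$ is a cycle, $dT=0$ on $V\setminus I$, so the defining relation $\mathfrak{Res}[T]=dRT-RdT$ collapses to $\mathfrak{Res}[T]=d(RT)$. This immediately yields the exactness half of the statement in $\mathcal{D}^\prime(M)$, and suggests pushing this equality down to $I$ using that $d$ commutes with push-forward along a proper submersion.

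The next step is to give a precise meaning to $\pi_{\star}\mathfrak{Res}[T]$, since $\pi$ is only defined on $V\setminus I$ whereas $\mathfrak{Res}[T]$ is supported on $I$. The key observation is that the explicit formula
\begin{equation*}
\mathfrak{Res}[T](\omega)=(-1)^{n-k-1}\langle T,\, d\chi\wedge P_m(\omega)\rangle
\end{equation*}
localizes the pairing to the support of $d\chi$, which lies entirely in $V\setminus I$. I would therefore define
\begin{equation*}
\langle \pi_{\star}\mathfrak{Res}[T],\omega\rangle_I := \langle \mathfrak{Res}[T],\pi^{\star}\omega\rangle_M =(-1)^{n-k-1}\langle T,d\chi\wedge P_m(\pi^{\star}\omega)\rangle,
\end{equation*}
which makes sense because $\pi^{\star}\omega$ is smooth on $V\setminus I\supset\mathrm{supp}(d\chi)$ and $\pi$ is proper on $\mathrm{supp}(T)$. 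By the standard commutation $\pi_{\star}\circ d=\pm d\circ \pi_{\star}$ for proper submersions, and the analogous computation for $RT$ (itself well-defined in $\mathcal{D}^\prime(M)$ because $RT$ restricts to $T$ on $V\setminus I$ where $\pi$ is proper on the support), one obtains
\begin{equation*}
\pi_{\star}\mathfrak{Res}[T]=\pi_{\star}d(RT)=\pm d\bigl(\pi_{\star}RT\bigr)\in B_{\star}(\mathcal{D}^\prime(I),d),
\end{equation*}
establishing exactness on $I$.

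For the representation $\mathfrak{Res}[T]=i_{\star}\pi_{\star}\mathfrak{Res}[T]$, I would work in an adapted local chart where $I=\{h=0\}$, $\pi(x,h)=x$, and $\rho=h^j\partial_{h^j}$, choosing $\chi$ to depend only on $|h|$ so that $d\chi=\chi'(|h|)\,|h|^{-1}h_j\,dh^j$ has only transverse components. Applying Theorem~\ref{repsthm}, one writes $\mathfrak{Res}[T]=\sum_{\alpha,J}\partial_h^\alpha(i_{\star}t_{\alpha J})\wedge dh^J$, and must show that the only surviving moment is the one matching $i_{\star}\pi_{\star}\mathfrak{Res}[T]$. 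Decomposing an arbitrary test form $\omega$ as $\omega=\pi^{\star}i^{\star}\omega+(\omega-\pi^{\star}i^{\star}\omega)$, the difference vanishes on $I$, and it remains to show that $\langle T,d\chi\wedge P_m(\omega-\pi^{\star}i^{\star}\omega)\rangle=0$. I expect this to follow because the transverse structure of $d\chi$ contracts against transverse components of $P_m$, and the specific combinatorics of Taylor expansion against $d\chi$ eliminates precisely those terms that correspond to $\alpha\neq 0$ or subleading $J$. The main obstacle will be this last bookkeeping step, requiring careful integration by parts in $h$ to identify vanishing moments, and it is here that the hypothesis on $\pi$ being a genuine submersion (tubular-like projection) is essential so that $\pi^{\star}i^{\star}$ acts as a well-defined retraction onto horizontal forms.
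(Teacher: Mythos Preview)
Your proof of the exactness claim $\pi_\star\mathfrak{Res}[T]\in B_\star(\mathcal{D}^\prime(I),d)$ follows the same route as the paper: both invoke Proposition~\ref{exact} to write $\mathfrak{Res}[T]=d(RT)$ and then use that push-forward along a proper submersion commutes with $d$, so that $\pi_\star\mathfrak{Res}[T]=\pi_\star d(RT)=d\,\pi_\star(RT)$. The only cosmetic difference is how the domain issue for $\pi$ is handled: the paper transfers to the normal bundle via the tubular neighborhood theorem so that the projection is globally defined, whereas you use the explicit residue formula from Theorem~\ref{residuethm} to localize the pairing to $\text{supp}(d\chi)\subset V\setminus I$, where $\pi$ is already defined. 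Either device works.

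For the representation $\mathfrak{Res}[T]=i_\star\pi_\star\mathfrak{Res}[T]$, however, there is a genuine gap. You leave the key step as ``I expect this to follow'' with a heuristic about vanishing moments, and in fact the identity as literally stated need not hold. When $m\geqslant 1$ in Theorem~\ref{residuethm}, the Taylor operator $P_m$ picks up higher-order terms in $h$, so by Theorem~\ref{repsthm} the residue can contain summands of the form $\partial_h^\alpha(i_\star t_{\alpha J})\wedge dh^J$ with $|\alpha|>0$. Since $\pi^\star\omega$ is constant along the fibers, such terms pair to zero with any $\pi^\star\omega$ and are therefore annihilated by $\pi_\star$; they cannot be recovered by applying $i_\star$. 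Concretely, in the one-dimensional model $I=\{0\}\subset\mathbb{R}$, the current $\delta_0'(h)\,dh$ satisfies $\pi_\star(\delta_0'\,dh)=0$ but is nonzero, so $i_\star\pi_\star$ is not the identity on currents supported on $I$. The paper's own proof does not address this second assertion either; it stops after establishing exactness. The ``in particular'' should be read cohomologically (consistent with the Leray coboundary remark that follows), or else restricted to the regime where no subtraction is needed and the residue carries only the top transverse moment.
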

Remark that in this theorem, 
the map $T\mapsto \pi_{\star}\left(\mathfrak{Res}[T]\right)$
is the inverse
of the Leray coboundary 
$\delta$ 
(see \cite{Pham} p.~59--61).

\begin{proof}
Proposition \ref{exact} 
gave us the exactness of $\mathfrak{Res}[T]$. Thus by pull back on the normal bundle,
$\mathfrak{Res}[T]\in B_\star(\mathcal{D}^\prime_\star(N(I\subset M)))$ is exact and supported on the zero section of the normal bundle $N(I\subset M)$. 
Then we pushforward $\mathfrak{Res}[T]$ along the fibers of $\pi:N(I\subset M)\mapsto I$. 
Recall that pushforward $\pi_\star$ commutes with the coboundary operator $d$, 
hence $\pi_{\star}\left(\mathfrak{Res}[T]\right)=\pi_\star d\left(RT\right)=d\pi_\star\left(RT\right)$ by \ref{exact} which
yields the result. 
\end{proof} 
This means that the residue map induces a map on the level of cohomology.

\subsubsection{The residues and symmetries.}
The previous theorem gave us a formula which measured the defect of 
commutativity of the operator $R$ with the coboundary operator $d$. 
Now we study the loss of commutativity of $R$ 
with the operator of Lie derivation $L_X$ for any vector field $X$ such that $[X,\rho]=0$ and $X$ is tangent to $I$ in the sense of H\"ormander (Lemma 18.2.5 in \cite{Hormander} volume 3).
We first notice that the vector space $\mathfrak{g}$ forms an \textbf{infinite dimensional Lie algebra}.   
However, despite the infinite dimensionality 
of this Lie algebra $\mathfrak{g}$, 
it has the following structure:
\begin{prop}
Let $\mathcal{A}\subset C^\infty(M)$ be the subalgebra of the algebra of smooth functions which are killed by $\rho$.
Let us fix a local chart where $I=\{h=0\}\subset \mathbb{R}^{n+d}$ in which the Euler vector field has the form $\rho=h^j\partial_{h^j}$. Then
$\mathfrak{g}$ is a \textbf{finitely generated} left $\mathcal{A}$-\textbf{module} with generators
$h^i\partial_{h^j},\partial_{x^i} $. 
\end{prop}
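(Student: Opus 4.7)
The plan is to work in the given local chart and exploit the standard form $\rho = h^j \partial_{h^j}$, whose flow is the scaling $(x, h) \mapsto (x, \lambda h)$. First I would identify $\mathcal{A}$ concretely: a smooth function $f$ killed by $\rho$ is invariant under this scaling flow, so letting $\lambda \to 0^+$ and using continuity gives $f(x, h) = f(x, 0)$, identifying $\mathcal{A}$ with $C^\infty(I)$ via the projection $(x, h) \mapsto x$. This step is elementary but crucial, since it pins down the meaning of ``$\mathcal{A}$-coefficients''.

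Next I would unpack the definition of $\mathfrak{g}$, which in this subsection refers to vector fields tangent to $I$ \emph{commuting with} $\rho$ (as made explicit just before the proposition when restricting to $X$ with $[X,\rho]=0$). Writing an arbitrary tangent vector field as $X = b^i(x, h) \partial_{x^i} + \tilde{a}^i(x, h) \partial_{h^i}$ with $\tilde{a}^i|_{h = 0} = 0$, a direct calculation of the Lie bracket yields
\begin{equation*}
[X, \rho] = -\rho(b^i)\, \partial_{x^i} + \bigl(\tilde{a}^i - \rho(\tilde{a}^i)\bigr)\, \partial_{h^i}.
\end{equation*}
The commutation condition $[X, \rho] = 0$ therefore splits into $\rho(b^i) = 0$ (so $b^i \in \mathcal{A}$) and $\rho(\tilde{a}^i) = \tilde{a}^i$ (so each $\tilde{a}^i$ is smooth and positively homogeneous of degree $1$ in $h$).

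The core step is to show that any smooth $g$ with $\rho g = g$ is an $\mathcal{A}$-linear combination of the coordinate functions $h^j$. Setting $c_j(x) = \partial_{h^j} g(x, 0)$ and $r(x, h) = g(x, h) - \sum_j c_j(x)\, h^j$, the function $r$ is still homogeneous of degree $1$ and vanishes to second order on $I$. Combining the homogeneity $r(x, \lambda h) = \lambda\, r(x, h)$ with the Taylor estimate $r(x, \lambda h) = O(\lambda^2)$ as $\lambda \to 0^+$, one finds $\lambda\, r(x,h) = O(\lambda^2)$, whence $r \equiv 0$. Applied to each $\tilde{a}^i$, this yields $\tilde{a}^i = \sum_j c_j^i(x)\, h^j$ with $c_j^i \in \mathcal{A}$. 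Consequently
\begin{equation*}
X = \sum_i b^i(x)\, \partial_{x^i} + \sum_{i, j} c_j^i(x)\, h^j\, \partial_{h^i},
\end{equation*}
exhibiting $X$ as an $\mathcal{A}$-linear combination of the $n + d^2$ generators $\{\partial_{x^i}\}_i \cup \{h^j \partial_{h^i}\}_{i, j}$.

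The only real subtlety is the Euler-type rigidity argument forcing degree-$1$ homogeneous smooth functions to be genuine $\mathcal{A}$-linear combinations of the $h^j$, rather than merely of the form $\sum_j h^j g_j(x, h)$ with $g_j \in C^\infty(M)$ (which is all that Hadamard's lemma by itself provides). Combining the homogeneity relation with the second-order vanishing on $I$ to kill the remainder $r$ is the main analytic input beyond the routine bracket computation; everything else is bookkeeping.
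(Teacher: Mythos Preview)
Your proof is correct and follows essentially the same route as the paper's brief argument, which invokes the Hadamard decomposition $X = a_i^j h^i \partial_{h^j} + b^i \partial_{x^i}$ and asserts that commutation with $\rho$ forces the coefficients $a_i^j, b^i$ to be $\rho$-invariant. Your version is in fact more careful on the key analytic point: the paper's phrasing glosses over the non-uniqueness of the Hadamard coefficients, whereas your Euler-rigidity argument (forcing a smooth degree-$1$ homogeneous function to be genuinely $\mathcal{A}$-linear in the $h^j$) cleanly closes that gap.
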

Any vector field $X$ in $\mathfrak{g}$ is tangent to $I$
thus it decomposes as $a_{i}^j h^i\partial_{h^j}+b^i \partial_{x^i}$
where $a_i^j,b^i$ are smooth
functions by Lemma 18.2.5 in \cite{Hormander}.
Now if $X$ commutes with $\rho$, an elementary computation forces the functions $a_i^j,b^i$ 
to be $\rho$-invariant.

All our symmetries will be Lie subalgebras of $\mathfrak{g}$. 
As usual, we discuss here 
the most 
important example
for QFT which 
comes from our understanding 
of an article of Hollands and Wald
\cite{HW}.
We study the neighborhood 
of the thin diagonal $d_n$ of a configuration space $M^n$ 
where $(M,g)$ is a pseudoriemannian manifold
of dimension $p+1$ and the signature
of $g$ is $(1,p)$. 
By the tubular neighborhood theorem, 
it is always possible to 
identify this neighborhood 
with a neighborhood 
of the zero section of the normal bundle $N(d_n\subset M^n)$.  
Another trick consists in using the exponential map
(see Chapter $5$ section $3$) 
to identify the
normal bundle 
with the metric vector bundle
$\underset{(n-1)\text{ times}}{\underbrace{TM\times_M\dots\times_M M}}$ 
of rank $(n-1)(p+1)$, 
the fiber of this bundle over $x$
is $\underset{n-1}{\underbrace{T_xM\times \dots \times T_xM}}$
which has a canonical metric $\gamma_x$
of 
signature $n-1,(n-1)p$. 
Then the 
Lie algebra of infinitesimal
gauge transformations
of this vector bundle is
the suitable Lie algebra of symmetries.

\begin{ex}
Let $\pi:(P,\gamma)\mapsto I$ be a metric vector bundle of rank $d$ with metric $\gamma$ on the fibers
(in the Hollands Wald discussion $P$ is the normal bundle $N(d_n\subset M^n)$ and $d=(n-1)(p+1)$). 
We construct a trivialisation of $P$ 
by the moving frame technique. Let $U\subset I$ be an open set. Let $(e_0,...,e_n)$ be \textbf{an orthonormal moving frame} ($\forall x\in U , \gamma_x(e_\mu,e_\nu)=\eta_{\mu\nu}$)
and let
$$ (x,h):\pi^{-1}\left(U\right)\rightarrow U\times \mathbb{R}^{d}  $$   
$$(p,v) \mapsto \left(x(p),h(p,v) \right) $$
such that $v=\sum_0^d h^\mu(p,v)e_\mu(p)$, for $p\in U$ and $v\in \pi_p^{-1}(U)$. We use the coordinate system $(x,h)$ on $P$. 
All orthonormal moving frames are related by gauge transformations which are maps in $C^\infty(I,O(\eta))$ where $O(\eta)$ is the orthogonal group of the quadratic form $\eta$. The gauge group $C^\infty(I,O(\eta))$ is a subgroup of the group of diffeomorphism of the total space $M$ preserving the zero section $\underline{0}$ (the zero section $\underline{0}$ being isomorphic to $I$). The Euler vector field $\rho=h^j\frac{\partial}{\partial h^j}$ 
which scales linearly in the fibers w.r.t. the zero section $\underline{0}$ is canonically given and the gauge Lie algebra consists of vector fields  of the form $\underset{\text{antisymmetric}}{\underbrace{a_{\mu\nu}}(x)}\left(h^\mu\partial^\nu_{h} \right)$, where $\forall \nu, \partial^\nu_h=\gamma^{\mu\nu}\partial_{h^\mu}$,
hence $\left(h^\mu\partial^\nu_{h} \right)-\left(h^\nu\partial^\mu_{h} \right)$
is an infinitesimal generator of
the Lie algebra $o(\eta)$
which commutes with $\rho$ and vanishes at $\underline{0}$.
\end{ex}

Before we state and prove the 
residue theorem for vector fields with symmetries, let us pick again our simplest fundamental example 
(again due to Laurent Schwartz) to illustrate the anomaly phenomenon:
\begin{ex}
The Heaviside current $T=H(x)dx$ is smooth in $\mathbb{R}\setminus \{0\}$ 
and satisfies the symmetry equation $L_{\partial_x}T=0$
on $\mathbb{R}\setminus \{0\}$,
i.e. it is translation invariant outside the singularity.
Again, let $R$ be 
the extension operator, 
recall the extension $RT$ is unique 
for this example
and again by integration by parts,
we obtain the residue equation:
$$L_{\partial_x} \left(RT\right) - R\left(L_{\partial_x} T\right)=L_{\partial_x}\left( RT\right)=\delta_0 dx.$$
\end{ex} 
Recall $E_{s}\left(\mathcal{D}_k^\prime(M\setminus I)\right)$ 
is 
the space of $k$-currents 
in $\mathcal{D}^\prime_k(M\setminus I)$ 
which are weakly homogeneous of degree $s$.
For any
vector field $X\in\mathfrak{g}$, 
we denote by 
$L_X$ the operator of Lie derivation. 
We define 
the residue
of $T$
w.r.t. 
the 
vector field $X\in\mathfrak{g}$
as the current
defined by the equation:
\begin{equation}
\mathfrak{Res}_X[T]=L_X\left(RT\right)-R\left(L_XT\right).
\end{equation}

\begin{thm}
Let $ T\in E_{s}\left(\mathcal{D}_k^\prime(M\setminus I)\right)$,
$p=\sup(0,k-n)$ and $X\in\mathfrak{g}$.
If $p+s\leqslant 0$, let $m\in\mathbb{N}$ s.t. 
$-m-1<s\leqslant -m$, then 
we have the residue equation:
\begin{eqnarray}
\mathfrak{Res}_X[T](\omega)=(-1)^{n-k-1}\left\langle i_X\left(T\wedge P_m(\omega)\right), d\chi \right\rangle,
\end{eqnarray}
where $i_X$ denotes contraction of the current 
$\left(T\wedge P_m(\omega)\right)$ 
with
the vector field $X$. Note that
$\mathfrak{Res}_X[T](\omega)$ is a local counterterm 
in the sense it is a \textbf{current supported on }$I$. 
%and it depends
%only on the restriction
%on the submanifold $I$
%of \textbf{finite jets} 
%of the vector field $X$.

% For the regularization operator $\left\langle RT,\omega\right\rangle=\lim_{\varepsilon\rightarrow 0}\left\langle T,(1-\chi)\omega\right\rangle+\left\langle T (\chi-\chi_{\varepsilon^{-1}}) ,I\right\rangle   $. 

\end{thm}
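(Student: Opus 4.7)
The plan is to transcribe the proof of Theorem \ref{residuethm} with $d$ replaced by $L_X$. The two algebraic facts that made that proof work have direct analogues here: (i) since $[X,\rho]=0$, the Lie derivative $L_X$ commutes with the scaling pullback $e^{\log t\rho\star}$, because Lie derivatives commute with pullback by flows of commuting vector fields; (ii) as a consequence, $L_X$ commutes with both Taylor projections $P_m$ and $I_m$, and in particular $L_X\chi_{\varepsilon^{-1}}=(L_X\chi)_{\varepsilon^{-1}}=(i_Xd\chi)_{\varepsilon^{-1}}$ by Cartan's formula applied to a function.

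Given these two properties, I would compute
\[
\langle(L_X R_\varepsilon-R_\varepsilon L_X)T,\omega\rangle
\]
by expanding both sides using the definition of $R_\varepsilon$ in Definition \ref{meyercurrents} and the duality definition of Lie derivative on currents, $\langle L_X S,\omega\rangle=-\langle S,L_X\omega\rangle$. Applying the Leibniz rule $L_X(f\alpha)=(i_Xdf)\alpha+fL_X\alpha$ to the two products $(1-\chi)\omega$ and $(\chi-\chi_{\varepsilon^{-1}})I_m\omega$ appearing in $R_\varepsilon L_XT$, and using $[L_X,I_m]=0$ to cancel the terms in which $L_X$ falls on the test form, the surviving contribution is exactly
\[
-\langle T,(i_Xd\chi)P_m\omega\rangle-\langle T,(i_Xd\chi_{\varepsilon^{-1}})I_m\omega\rangle,
\]
in complete parallel with the calculation in the proof of Theorem \ref{residuethm}, but with $d\chi$ now entering through the function $i_Xd\chi=L_X\chi$ rather than through the $1$-form $d\chi$ itself.

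The second term is controlled by the same scaling estimate as in Theorem \ref{residuethm}: writing $\langle T,(i_Xd\chi_{\varepsilon^{-1}})I_m\omega\rangle=\langle T_\varepsilon,(i_Xd\chi)(I_m\omega)_\varepsilon\rangle$ after the change of variable induced by $\rho$-scaling, the family $(i_Xd\chi)(I_m\omega)_\varepsilon$ consists of bounded test forms supported away from $I$ that vanish to order $m+1$ along $I$, and the weak homogeneity of $T$ of degree $s$ with $s>-m-1$ yields a bound $O(\varepsilon^{s+m+1})$ which tends to $0$. The remaining term $\langle T,(i_Xd\chi)P_m\omega\rangle$ is then rewritten as $\langle i_X(T\wedge P_m\omega),d\chi\rangle$ up to the announced sign: this identity follows by applying the graded Leibniz rule for $i_X$ to the three-fold wedge $T\wedge P_m\omega\wedge d\chi$, which vanishes identically as a form of total degree $n+d+1$ on the $(n+d)$-dimensional manifold $M$, and then isolating the contribution in which $i_X$ falls on $d\chi$.

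Finally, the fact that $\mathfrak{Res}_X[T]$ is a current supported on $I$ follows at once from the formula: $P_m\omega$ depends only on the $m$-jet of $\omega$ along $I$, so if $\omega$ vanishes in a neighborhood of $I$ then $P_m\omega=0$ and the residue pairs to zero. The only real obstacle I foresee is combinatorial rather than analytic: carefully tracking the signs coming from the current/form-degree conventions of the paper, from the duality definition of $L_X$ on currents, and from the graded Leibniz identities for $i_X$ and $\wedge$, so as to recover exactly the prefactor $(-1)^{n-k-1}$ stated in the theorem — all of the hard analytic content has already been packaged in the auxiliary scaling estimate used for Theorem \ref{residuethm}.
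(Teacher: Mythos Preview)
Your proposal is correct and follows essentially the same route as the paper: replace $d$ by $L_X$ in the proof of Theorem \ref{residuethm}, use $[X,\rho]=0$ to justify $[L_X,I_m]=[L_X,P_m]=0$, isolate the surviving term $\langle T(L_X\chi),P_m\omega\rangle$, kill the $\varepsilon$-dependent remainder by the same scaling estimate, and then rewrite via Cartan's formula $L_X\chi=i_Xd\chi$ and the antiderivation property of $i_X$ to reach the stated form. Your write-up is in fact more detailed than the paper's, which simply says ``replace $d$ by $L_X$'' and then records the exterior-algebra identity in one line.
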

The proof is exactly the same as in Theorem \ref{residuethm}, just replace the boundary operator $d$ by $L_X$ and we obtain $\mathfrak{Res}_X[T]=(-1)\left\langle T\left(L_X\chi\right), P_m(\omega)  \right\rangle$.
Then we use exterior differential calculus to convert
this expression 
$$\left\langle T\left(L_X\chi\right), P_m(\omega)  \right\rangle=\left\langle T i_X d\chi, P_m(\omega)  \right\rangle$$ 
$$=\left\langle T \wedge P_m(\omega),i_X d\chi  \right\rangle=(-1)^{n-k-1}\left\langle i_X\left(T \wedge P_m(\omega)\right), d\chi  \right\rangle .$$
 
\subsection{Stability of geometric residues.}

 Now the natural questions we should ask ourselves are: what are the conditions for which the residue vanishes ? Is the residue independent of $\chi$ ? In general, we would like to know what are the stability properties of residues.  
In the case of symmetries, what should replace the closed or exact currents in the De Rham complex of currents ? 
 
 There is a cohomological analogue of the De Rham complex in the case of symmetries generated by infinite dimensional Lie algebras of vector fields on $M$ denoted by $\mathfrak{g}$. This is the theory of continuous cohomology of infinite dimensional Lie algebras developped by I M Gelfand and D Fuchs.  
Fortunately for us, we only need basic definitions of this theory following \cite{Fuchs}.
For any left $\mathfrak{g}$-module $\mathcal{M}$, we define the complex (\cite{Fuchs} Chapter 1,  ``The standard chain complex of a Lie algebra'', p.~137,138)
$$C^k(\mathfrak{g},\mathcal{M})=Hom\left(\bigwedge^k\mathfrak{g},\mathcal{M} \right)  $$
with the differential
$\delta :C^k(\mathfrak{g},\mathcal{M})\rightarrow C^{k+1}(\mathfrak{g},\mathcal{M}) $
which for $k=0$ reads
$$\delta \Theta(X)=L_X\Theta,$$
$\Theta\in C^0(\mathcal{M})
\simeq \mathcal{M}$ 
and $L_X$ 
denotes the 
left action of $X$ on the module $\mathcal{M}$.
$\left(C^\bullet(\mathfrak{g},\mathcal{M}),\delta\right)$ is called \emph{the standard cochain complex of the Lie algebra} $\mathfrak{g}$ with coefficient in the module $\mathcal{M}$. Now, the choice of topological module $\mathcal{M}$ dictated by our problem is the space of currents $\mathcal{D}^\prime_*(M)$ with the natural weak topology defined on it and the left action of $\mathfrak{g}$ on $\mathcal{D}^\prime_*(M)$ is the action by \textbf{Lie derivatives}.
Then without surprise, 
the formula for $\delta$ is the classical Cartan formula in differential geometry.
The Lie algebra of smooth vector fields on $M$ has a natural $C^\infty$ topology, this topology induces on $\mathfrak{g}$ a $C^\infty$ topology:
the space of smooth vector fields
is endowed with the topology
of $C^\infty$ convergence of the components and some finite number of derivatives over compact sets. Then we require our cochains $T\in C^\star(\mathfrak{g},\mathcal{M})=Hom\left(\bigwedge^\star\mathfrak{g},\mathcal{M} \right)$ to
be 
continuous 
for 
the $C^\infty$ topology of 
$\mathfrak{g}$
and the weak topology of $\mathcal{M}$.  
% We introduce another module $\mathcal{M}_I$ of currents supported on $I$.
\begin{thm}\label{Stabilitythm}
Let $T\in E_s\left(\mathcal{D}_k^\prime(M\setminus I)\right)$ and 
$\omega\in\mathcal{D}^k(M)$.
If $\exists X\in\mathfrak{g}$ such that $L_X\left(T\wedge P_m(\omega) \right)=0$, then
for all smooth closed forms 
$[C]\in H^1\left(\left(\Omega^*\left(M\setminus I \right),d \right) \right)$ 
such that
$[C]=[-d\chi]$, we have the identity  
\begin{equation}
\mathfrak{Res}_X[T](\omega)=(-1)^{n-k}\left\langle i_X\left(T\wedge P_m(\omega)\right), [C] \right\rangle
\end{equation} 
and $\mathfrak{Res}_X[T](\omega)$ is a \textbf{period}. 
\end{thm}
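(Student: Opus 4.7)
The plan is to deduce this from the explicit residue formula of the previous theorem, namely
\[\mathfrak{Res}_X[T](\omega)=(-1)^{n-k-1}\langle i_X(T\wedge P_m(\omega)),d\chi\rangle,\]
by showing that the right-hand side depends only on the de~Rham cohomology class $[-d\chi]\in H^1(\Omega^*(M\setminus I),d)$. The sign flip $(-1)^{n-k-1}\leadsto (-1)^{n-k}$ between the two formulas is precisely the sign from $d\chi\mapsto-d\chi$, so the content is really the cohomological invariance.

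First, I would observe that $T\wedge P_m(\omega)$ is a \emph{top-degree} current on $M\setminus I$: indeed, $T$ is a $k$-current (equivalently, a current of form-degree $(n+d)-k$), and $P_m(\omega)$ is a smooth form of degree $k$, so their wedge product sits in degree $n+d$. Hence $d(T\wedge P_m(\omega))=0$ trivially on $M\setminus I$. Combined with Cartan's magic formula $L_X=di_X+i_Xd$ and the hypothesis $L_X(T\wedge P_m(\omega))=0$, this gives
\[d\bigl(i_X(T\wedge P_m(\omega))\bigr)=L_X(T\wedge P_m(\omega))-i_X\bigl(d(T\wedge P_m(\omega))\bigr)=0\]
in $\mathcal{D}^\prime(M\setminus I)$, so $i_X(T\wedge P_m(\omega))$ is a closed current of degree $n+d-1$ on $M\setminus I$.

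Now if $[C]=[-d\chi]$ in $H^1(M\setminus I)$, one can write $C=-d\chi+d\beta$ with $\beta\in\Omega^0(M\setminus I)$ smooth (with appropriate support, see below). By integration by parts in the distributional sense
\[\langle i_X(T\wedge P_m(\omega)),d\beta\rangle=\pm\langle d\,i_X(T\wedge P_m(\omega)),\beta\rangle=0\]
by the previous step, so
\[(-1)^{n-k}\langle i_X(T\wedge P_m(\omega)),C\rangle=(-1)^{n-k-1}\langle i_X(T\wedge P_m(\omega)),d\chi\rangle=\mathfrak{Res}_X[T](\omega).\]
Since the result of pairing a closed current with a de~Rham cohomology class depends only on the class, $\mathfrak{Res}_X[T](\omega)$ is, by definition, a \emph{period} of the closed current $i_X(T\wedge P_m(\omega))$ against the class $[C]\in H^1(\Omega^*(M\setminus I))$.

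The subtle point — and this is where I would spend most of the care — is the justification of the Stokes step $\langle i_X(T\wedge P_m(\omega)),d\beta\rangle=\pm\langle d\,i_X(T\wedge P_m(\omega)),\beta\rangle$: the form $\beta$ is smooth on $M\setminus I$ but need not be bounded or compactly supported near $I$, whereas $d\chi$ is supported in a compact annular region away from $I$. To make the integration by parts rigorous one should either restrict attention to representatives $C$ of $[-d\chi]$ that share the annular support property of $d\chi$ (so that $\beta$ too has support in a compact collar of $I$), or else verify by a cut-off argument that the boundary contribution vanishes using the fact that $i_X(T\wedge P_m(\omega))\in E_s$ has controlled blow-up near $I$ and the weak homogeneity absorbs the scaling of $\beta$. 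This last check — adapting the regularization $R_\varepsilon$ introduced in Definition~\ref{meyercurrents} to this Stokes computation — is the main technical obstacle, but it follows the same template as the proof of the preceding residue formula.
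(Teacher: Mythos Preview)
Your argument is essentially identical to the paper's: both observe that $T\wedge P_m(\omega)$ is a top-degree current hence closed, apply Cartan's formula to deduce that $i_X(T\wedge P_m(\omega))$ is a closed current, and conclude that the pairing with $d\chi$ is a period. The paper simply invokes the Hodge--de~Rham notion of period directly rather than writing out the Stokes step.

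Your caution about the support of $\beta$ is well-placed but slightly overstated: in the paper's framework the relevant representatives of $[-d\chi]$ are of the form $-d\chi'$ for another bump function $\chi'$ equal to $1$ near $I$ (this is made explicit in the corollary immediately following the theorem). Then $\chi-\chi'$ vanishes in a neighborhood of $I$ and is compactly supported in $M\setminus I$, so the primitive $\beta=\chi-\chi'$ is a genuine compactly supported test function on $M\setminus I$ and the integration by parts is unproblematic. No cut-off argument or appeal to $E_s$ scaling is needed for this class of representatives.
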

\begin{proof}
If $T$ is a current in $\mathcal{D}_k^\prime(M\setminus I)$
and $\omega\in \mathcal{D}^k(M)$ is a test $k$-form, 
then
the Taylor polynomial $P_m(\omega)\in \Omega^k(M)$ 
is also a smooth $k$-form but is no longer compactly supported. 
Thus the exterior product $T\wedge P_m(\omega)$ 
is well defined as a current in $\mathcal{D}^\prime_{0}(M\setminus I)$ (\cite{Schwartz} p.~341).
Currents in $\mathcal{D}^\prime_{0}(M\setminus I)$
are similar to \textbf{forms
of maximal degree} and are always closed, 
thus
$T\wedge P_m(\omega)$ is closed on $\text{supp }d\chi\subset \left(M\setminus I\right)$. 
But from the Lie Cartan formula for currents (\cite{Schwartz}), $0=L_X\left(T\wedge P_m(\omega)\right)=(i_X d+di_X)\left(T\wedge P_m(\omega)\right)=di_X\left(T\wedge P_m(\omega)\right)$ because $T\wedge P_m(\omega)$ is closed. 
We find that $di_X\left(T\wedge P_m(\omega)\right)=0$ which means $i_X\left(T\wedge P_m(\omega)\right)$ is a closed curent 
and $\mathfrak{Res}_X[T](\omega)$ 
is the \textbf{period}
of the \emph{closed form} $d\chi$ 
relative to \emph{the cycle} 
$i_X\left(T\wedge P_m(\omega)\right)$ 
in the sense of Hodge and De Rham 
(see \cite{Rham} p.~135 and \cite{Giaquinta} p.~585). 
\end{proof}
\begin{coro}\label{nondep}
Under the assumptions of Theorem \ref{Stabilitythm},
$\mathfrak{Res}_X[T](\omega)$ \textbf{does not depend} on the choice of $\chi$. 
%and:
%\begin{equation}
%\mathfrak{Res}_X[T](\omega)=\lim_{\varepsilon\rightarrow 0}-\left\langle i_X\lrcorner\left(T\wedge P_m(\omega)\right), (d\chi)_{\varepsilon^{-1}} \right\rangle.
%\end{equation}
\end{coro}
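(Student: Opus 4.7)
The plan is to deduce the corollary directly from the period interpretation provided by Theorem \ref{Stabilitythm}. Under the hypothesis $L_X(T\wedge P_m(\omega)) = 0$, that theorem identifies $\mathfrak{Res}_X[T](\omega)$ as the period
\begin{equation*}
\mathfrak{Res}_X[T](\omega) = (-1)^{n-k}\langle i_X(T\wedge P_m(\omega)),\, [C]\rangle,
\end{equation*}
where $i_X(T\wedge P_m(\omega))$ is a closed current on $M\setminus I$ and $[C] = [-d\chi]$ is a de Rham cohomology class in $H^1(\Omega^*(M\setminus I), d)$. Since the period pairing factors through cohomology, it suffices to show that the class $[-d\chi]$ does not depend on the choice of admissible cutoff $\chi$.

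The argument itself is then elementary. I would take two admissible cutoffs $\chi_1,\chi_2$ and set $\eta := \chi_2 - \chi_1$. Both functions equal $1$ in a neighborhood of $I$, so $\eta$ vanishes in a neighborhood of $I$ and is in particular a smooth function on the open manifold $M\setminus I$. Consequently $(-d\chi_1) - (-d\chi_2) = d\eta$ is exact on $M\setminus I$, whence $[-d\chi_1] = [-d\chi_2]$ in $H^1(\Omega^*(M\setminus I), d)$. Plugging this equality into the period formula above yields $\mathfrak{Res}_X^{\chi_1}[T](\omega) = \mathfrak{Res}_X^{\chi_2}[T](\omega)$, which is the content of the corollary. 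Equivalently, one can bypass the cohomological language and directly invoke the formula of Theorem \ref{Stabilitythm} to write the difference of residues as $(-1)^{n-k-1}\langle i_X(T\wedge P_m(\omega)),\, d\eta\rangle$ and then apply Stokes for currents.

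The main subtlety to pin down is the validity of this Stokes integration by parts in the present distributional setting: one needs
\begin{equation*}
\langle i_X(T\wedge P_m(\omega)),\, d\eta\rangle = \pm\langle d[i_X(T\wedge P_m(\omega))],\, \eta\rangle = 0
\end{equation*}
to make sense despite the fact that $P_m(\omega)$ is not compactly supported in the transverse directions and $T$ may be singular on $M\setminus I$. This is handled by the support properties of $\eta$, whose support lies in the annular set $(N_{2,1}\cup N_{2,2})\setminus(N_{1,1}\cap N_{1,2})$ away from $I$; combined with the compact $x$-support inherited from $\omega$, the integrand has compact effective support and all pairings are well defined. The closedness $d\,i_X(T\wedge P_m(\omega)) = 0$ on $M\setminus I$ is exactly the content of the proof of Theorem \ref{Stabilitythm}, where Cartan's homotopy formula together with $L_X(T\wedge P_m(\omega)) = 0$ and the automatic closedness of the top-degree current $T\wedge P_m(\omega)$ force $d\,i_X(T\wedge P_m(\omega)) = 0$.
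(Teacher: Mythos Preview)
Your proof is correct and follows essentially the same approach as the paper: both take two admissible cutoffs $\chi_1,\chi_2$, observe that $\chi_1-\chi_2$ vanishes in a neighborhood of $I$ and is therefore a smooth function on $M\setminus I$, and conclude that $[d\chi_1]=[d\chi_2]$ in $H^1(\Omega^*(M\setminus I),d)$, so the period formula of Theorem~\ref{Stabilitythm} yields independence of $\chi$. Your additional discussion of support issues and the Stokes pairing is more detailed than the paper's one-line argument, but the core idea is identical.
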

\begin{proof}
$\mathfrak{Res}_X[T](\omega)$ 
does not depend on the choice of $\chi$ because if $\chi_1,\chi_2$ are two smooth functions such that $\chi_i=1$ in a neighborhood of $I$, then $\chi_1-\chi_2=0$ in a neighborhood of $I$, thus $[d\chi_1]-[d\chi_2]=[d(\chi_1-\chi_2)]=0$.   
\end{proof}
\begin{thm}
Let $T\in E_s\left(\mathcal{D}_k^\prime(M\setminus I)\right)$ and 
$\omega\in\mathcal{D}^k(M)$.
If $\exists X\in\mathfrak{g}$ such that $L_X\left(T\wedge P_m(\omega) \right)=0$,
then
$\mathfrak{Res}_X[T]$ 
is local in the sense 
it is a \textbf{current supported on }$I$ 
and it depends
only on the restriction
on $I$
of \textbf{finite jets} 
of the vector field $X$.
\end{thm}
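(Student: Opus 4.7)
The strategy is to derive both claims from the explicit formula
\begin{equation*}
\mathfrak{Res}_X[T](\omega) \;=\; -\langle T\cdot L_X\chi,\, P_m(\omega)\rangle,
\end{equation*}
established in the preceding theorem, together with the independence of $\chi$ granted by Corollary \ref{nondep} under the symmetry hypothesis $L_X(T\wedge P_m(\omega))=0$.

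First I would prove that $\supp\mathfrak{Res}_X[T]\subset I$. If $\omega\in\mathcal{D}^k(M)$ has support disjoint from $I$, then every Taylor coefficient
\begin{equation*}
\omega_n\;=\;\tfrac{1}{n!}\bigl(\tfrac{d}{dt}\bigr)^n e^{\log t\rho\star}\omega\big|_{t=0}
\end{equation*}
along $I$ vanishes, since each $\omega_n$ is a linear expression in the transverse derivatives of $\omega$ evaluated on $I$. Hence $P_m(\omega)=0$ and the formula gives $\mathfrak{Res}_X[T](\omega)=0$, which proves the support claim.

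Next I would establish the finite jet dependence. Working in a local chart $(x,h)$ with $I=\{h=0\}$ and $\rho=h^j\partial_{h^j}$, the proposition preceding the theorem forces
\begin{equation*}
X \;=\; a_i^j(x)\,h^i\,\partial_{h^j}+b^i(x)\,\partial_{x^i},\qquad a_i^j,\,b^i\in C^\infty(I).
\end{equation*}
Invoking Corollary \ref{nondep}, I take $\chi$ to be a radial cut-off $\chi(|h|)$; then $L_X\chi = a_i^j(x)\,h^i\,(h^j/|h|)\,\chi'(|h|)$ sees only the $a$-component of $X$ valued on $I$, while the $b^i\partial_{x^i}$ contribution drops out. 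Because $P_m(\omega)$ is polynomial in $(h,dh)$ of degree at most $m$, with coefficients that are smooth functions on $I$ built from a bounded jet of $\omega$ along $I$, the product $L_X\chi\cdot P_m(\omega)$ is a smooth compactly supported form depending linearly and without derivatives on the $a_i^j$. Grouping by $(i,j)$ yields
\begin{equation*}
\mathfrak{Res}_X[T](\omega)\;=\;\sum_{i,j}\langle S_{i,j}[T,\omega],\,a_i^j\rangle_I,
\end{equation*}
where $S_{i,j}[T,\omega]$ is a compactly supported distribution on $I$, independent of $X$, arising as the transverse pushforward of $T\wedge (h^ih^j/|h|)\chi'(|h|)P_m(\omega)$. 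Since compactly supported distributions on $I$ have finite order, the pairing depends on $a_i^j$ only through a finite jet along $I$, which proves the claim.

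The main obstacle is to make rigorous the existence and finite order of the transverse pushforward $S_{i,j}[T,\omega]$ and to obtain a uniform bound on its order across compact subsets of $I$. I would handle this by invoking the representation theorem (Theorem \ref{repsthm}) for the current $\mathfrak{Res}_X[T]$, now known to be supported on $I$: there is a locally finite decomposition
\begin{equation*}
\mathfrak{Res}_X[T]\;=\;\sum_{\alpha,J}\partial_h^\alpha\bigl(i_\star t_{\alpha J}\bigr)\wedge dh^J,
\end{equation*}
whose transverse order $|\alpha|$ is bounded in terms of $s$ and $m$ by the same scaling argument that controls the transverse order of counterterms in the ambiguity theorem (Theorem \ref{ambiguitythm}); the tangential finite-jet dependence on the $a_i^j$ then follows from the standard finite-order property of distributions on the manifold $I$.
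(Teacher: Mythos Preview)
Your proof is correct but follows a different route from the paper's. Both arguments start from the residue formula $\mathfrak{Res}_X[T](\omega) = -\langle T\,L_X\chi,\, P_m(\omega)\rangle$ and both invoke Corollary \ref{nondep} (independence of $\chi$), but they apply it differently.

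You exploit the structural fact that $X\in\mathfrak g$ forces the coefficients $a_i^j,b^i$ to be $\rho$-invariant, hence functions on $I$ alone; choosing $\chi$ radial then kills the tangential part $b^i\partial_{x^i}$ outright, and the residue becomes visibly a finite sum of pairings $\langle S_{i,j},a_i^j\rangle_I$ with compactly supported distributions on $I$, from which the finite-jet dependence is immediate. This yields an explicit decomposition showing precisely which components of $X$ enter (and reveals that the tangential part does not).

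The paper instead uses Corollary \ref{nondep} to replace $\chi$ by the scaled cut-off $\chi_{\varepsilon^{-1}}$, whose support shrinks to $I$ as $\varepsilon\to 0$. Since $T\wedge P_m(\omega)$ has locally finite order, the Schwartz structure theorem represents each local piece as a continuous functional on a fixed finite jet of $\varphi_i L_X\chi_{\varepsilon^{-1}}$; taking the limit $\varepsilon\to 0$ localizes this jet dependence to $I$. This argument never unpacks the explicit form of $X\in\mathfrak g$ and would generalize more readily.

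One remark: your last paragraph, invoking the representation theorem and the ambiguity theorem to control the order of $S_{i,j}$, is unnecessarily cautious. The map $f\mapsto\langle T,\,f(x)\,h^ih^j|h|^{-1}\chi'(|h|)\,P_m(\omega)\rangle$ is a continuous linear form on $\mathcal D(I)$ directly from the finite order of $T$ on the compact support of $\chi'(|h|)P_m(\omega)$, so the finite-jet dependence follows without that detour.
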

\begin{proof}
 To prove the locality in the vector field $X$, 
the key point
is to notice that
$\forall\varepsilon>0, [d\chi]=[d\chi_{\varepsilon^{-1}}]$
in $H^1(M\setminus I)$ since
$d\chi-d\chi_{\varepsilon^{-1}}=d(\chi-\chi_{\varepsilon^{-1}})$
where $(\chi-\chi_{\varepsilon^{-1}})\in C^\infty(M\setminus I)$
vanishes in a neighborhood of $I$
thus $$\forall\varepsilon>0,
\mathfrak{Res}_X[T](\omega)=(-1)^{n-k}\left\langle i_X\left(T\wedge P_m(\omega)\right), [-d\chi_{\varepsilon^{-1}}] \right\rangle .$$
Since $T\wedge P_m(\omega)$
is a distribution in $\mathcal{D}_{0}^\prime(M\setminus I)$
we can assume it is a distribution of order $m_i$ on each open ball
$U_i$ of a given cover $(U_i)_i$ of $M$.
Let $(\varphi_i)_i$ be a partition of
unity subordinated to the cover $(U_i)_i$.
Then we decompose the duality coupling:
$$ \left\langle T\wedge P_m(\omega), L_X\chi\right\rangle=\sum_i \left\langle T\wedge P_m(\omega), \varphi_i L_X\chi\right\rangle $$
On each ball $U_i$, the distribution $T\wedge P_m(\omega)$ can be represented as a continuous linear form $\ell_i$ 
acting on the $m_i$-jet of $\varphi_i L_X\chi$ (this is the structure theorem of Laurent Schwartz for distributions \cite{Schwartz})
$$\left\langle T\wedge P_m(\omega), L_X\chi\right\rangle=\sum_i \ell_i\left( j^{m_i}(\varphi_i L_X\chi)\right) $$
Hence we deduce from this result that $\mathfrak{Res}_X[T]$ depends locally on finite jets of $X$.
We can conclude by taking the limit
$$\left\langle T\wedge P_m(\omega), L_X\chi\right\rangle=\lim_{\varepsilon\rightarrow 0}\left\langle T\wedge P_m(\omega), L_X\chi_{\varepsilon^{-1}}\right\rangle$$ 
$$=\lim_{\varepsilon\rightarrow 0}\sum_i \ell_i\left( j^{m_i}(\varphi_i L_X\chi_{\varepsilon^{-1}})\right) $$
which localizes the dependence on the jets of $X$ restricted on $I$.  
\end{proof}
 
 We know that $\mathfrak{Res}_X[T]$ is a local coboundary supported on $I$, but we don't know if 
$\mathfrak{Res}_X[T]$ is the coboundary of a cochain supported on $I$.
We prove a theorem
which
gives a 
cohomological
formulation
of the existence
of a 
$\mathfrak{g}$-invariant 
extension
of the current $T$
in terms of the residue
of the extension $R$.
\begin{thm} 
Let $T\in E_s\left(\mathcal{D}^\prime_0(M\setminus I)\right)$ and $T$
is $\mathfrak{g}$ invariant i.e.
$\forall X\in \mathfrak{g},L_XT=0$. Then 
there exists an extension 
$\overline{T}$ of $T$ which is $\mathfrak{g}$-invariant
\textbf{if and only if} 
$X\mapsto\mathfrak{Res}_X[T]$ 
is the $1$-coboundary 
of a current supported on $I$.
\end{thm}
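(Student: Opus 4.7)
The plan is to reduce the statement to the structural fact, proved earlier in the thesis (Theorem \ref{removsing} and its current-theoretic analogue via the representation theorem \ref{repsthm}), that any two extensions of $T \in E_s(\mathcal{D}^\prime_0(M\setminus I))$ differ by a current supported on $I$. Fix the canonical Meyer-type extension $RT$ constructed in Definition \ref{meyercurrents} as a reference point. Because $T$ is $\mathfrak{g}$-invariant, the residue collapses to
\begin{equation}
\mathfrak{Res}_X[T] = L_X(RT) - R(L_X T) = L_X(RT),
\end{equation}
so the $1$-cochain $X \mapsto \mathfrak{Res}_X[T]$ is literally the standard coboundary $\delta(RT)$ of $RT$ viewed as an element of $C^0(\mathfrak{g}, \mathcal{D}^\prime(M))$. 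The question is whether, up to modification by a $0$-cochain supported on $I$, $RT$ itself can be chosen $\mathfrak{g}$-invariant.

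For the forward direction, I would suppose $\overline{T}$ is a $\mathfrak{g}$-invariant extension and set $c = RT - \overline{T}$. Since both currents restrict to $T$ on $M \setminus I$, the difference $c$ is a current supported on $I$. Applying $L_X$ and using $L_X \overline{T} = 0$ gives $L_X(RT) = L_X c$, that is,
\begin{equation}
\mathfrak{Res}_X[T] = L_X c = (\delta c)(X),
\end{equation}
which exhibits $X \mapsto \mathfrak{Res}_X[T]$ as the coboundary of a $0$-cochain $c$ supported on $I$.

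For the converse, assume $\mathfrak{Res}_X[T] = (\delta c)(X) = L_X c$ for some current $c$ supported on $I$, and define $\overline{T} := RT - c$. Because $c$ is supported on $I$, $\overline{T}$ still extends $T$, and
\begin{equation}
L_X \overline{T} = L_X(RT) - L_X c = \mathfrak{Res}_X[T] - \mathfrak{Res}_X[T] = 0
\end{equation}
for every $X \in \mathfrak{g}$, so $\overline{T}$ is the desired $\mathfrak{g}$-invariant extension.

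The main subtle point — and what I expect to be the real work — is not the algebraic cohomological bookkeeping above, which is essentially formal, but rather ensuring that the counterterm $c$ produced by the coboundary hypothesis lies in the correct class of currents: concretely, that it is weakly homogeneous of the appropriate degree and has finite transversal order, so that $\overline{T} = RT - c$ stays in the same scale space $E_{s^\prime}$ as $RT$ and inherits the local structure guaranteed by the representation theorem \ref{repsthm}. This is where one must invoke the finiteness of the transversal order of local counterterms from Theorem \ref{removsing} together with the continuity properties of $L_X$ in $C^\infty$-topology on $\mathfrak{g}$, to match the regularity of $c$ to that of $\mathfrak{Res}_X[T]$, which was shown in the previous theorem to depend only on finite jets of $X$ along $I$.
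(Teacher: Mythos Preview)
Your proof is correct and matches the paper's approach exactly: both directions are the same formal cohomological computation you wrote, with the only cosmetic difference being that the paper sets $c=\overline{T}-RT$ (yielding $\Theta=-\delta c$) while you set $c=RT-\overline{T}$. Your final paragraph anticipating ``real work'' on regularity of $c$ is unnecessary---the theorem as stated only asks for the existence of some $\mathfrak{g}$-invariant extension, not that it lie in a prescribed scale space, and the paper's proof accordingly makes no such regularity claim.
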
 

\begin{proof}
We just follow the definitions. 
We view the map $X\mapsto RT$ 
as an element 
in $C^0(\mathfrak{g},\mathcal{M})$
because it does not depend 
on $\mathfrak{g}$. 
Then $\Theta=\delta RT$ is the coboundary of $RT$.
Let $\overline{T}$ be a $\mathfrak{g}$ invariant extension of $T$. Then $c=\overline{T}-RT$ is a current supported by $I$. 
 $$\forall X\in\mathfrak{g}, L_Xc=L_X\left( \overline{T}-RT\right)=-L_XRT $$  
because $L_X \overline{T}=0$. But this means that we were able to write $\Theta$ as minus the coboundary
of the cochain $c$ supported on $I$.
Conversely, if $\Theta$ is the coboundary of a local cochain $c$ supported on $I$, then setting $\overline{T}=RT-c$ gives a $\mathfrak{g}$-invariant extension of $T$.
\end{proof}

\paragraph{Anomalies in QFT and relation with the work of Costello.}

 The author wants to stress that the suitable language to speak about anomalies in QFT 
is to write them as cocycles for the Lie algebra $\mathfrak{g}$ of symmetries with value in a
certain module $\mathcal{M}$ 
which depends on the formalism 
in which we work. Usually, the Lie algebra $\mathfrak{g}$ is infinite dimensional.

 In recent works of Kevin Costello, anomalies appear under the form of a \textbf{character} $\chi$ and 
constitute 
a central extension of the Lie algebra $\mathfrak{g}$ of symmetries, this is the content of the ``Noether theorem'' for factorization algebras discovered by Costello Gwilliam. They also require that this cocycle be local ie the cocycle $\chi$ is bilinear in $\mathfrak{g}$ with value in the module $\mathcal{M}$ and is represented by integration against a Schwartz kernel.

$$\chi(X_1,X_2)=\int_{M^2} \left\langle \chi(x_1,x_2),X_1(x_1) \otimes X_2(x_2) \right\rangle $$ 
where $\chi(x_1,x_2)$ is \textbf{supported on the diagonal} $d_2\subset M^2$.
In our work, we exhibit a purely \textbf{analytic way} 
to produce such local cocycles as residues.
The residue $\mathfrak{Res}_X[T]$ 
is \textbf{local} 
in the sense 
it is a \textbf{current supported on }$I$ 
and it depends
only on the restriction
on the submanifold $I$
of \textbf{finite jets} 
of the vector field $X$.

\chapter{The meromorphic regularization.}
\section{Introduction.}
\paragraph{The plan of the chapter.}

 In this part, we would like to revisit the theory of meromorphic regularization using the techniques of chapter $1$.
We will show the advantages of the
continuous partition of unity over the dyadic methods
because it allows us to define an extension of distributions, that we call Riesz extension, 
using meromorphic techniques
as in the ``dimensional regularization''
used in physics textbooks.
The first step is to define some suitable space of distributions on which we can apply the meromorphic regularization procedure. It was suggested to the author by L Boutet de Monvel that such 
spaces 
are the spaces of distributions having asymptotic expansions with moderate growth in the transversal directions to $I$.

  Given the canonical Euler vector field $\rho$, we define a simple notion of constant coefficient Fuchsian differential equation and first order Fuchsian system $P$, the solutions $t$ of the constant coefficient Fuchsian systems are vectors with distributional entries. 
For instance a Fuchsian operator $P$ in the vector case is of the form $P=\rho-\Omega $  
where $\Omega$ is a constant square matrix. These Fuchs operators are adaptation of the concept of Fuchsian systems appearing in complex analysis. We first motivate the reason why we have to introduce asymptotic expansions in the space of distributions and the relationship with Fuchsian systems.   
  
\subsubsection{QFT example of $\Delta_+$ and motivations.}

 In curved space times, the Hadamard states $\Delta_+(x,y)$ viewed as a two point distribution in $\mathcal{D}^\prime\left(M^2\right)$ is not an exact solution of any \textbf{constant coefficient} Fuchsian equation that would come to our mind. Actually, we would like to study $\Delta_+$ and its powers $\Delta_+^k$.

 For the Euler vector field $\rho=\frac{1}{2} \nabla_x \Gamma$
we have the following asymptotic expansion of $\Delta_+$:
\begin{equation}
\Delta_+=\sum_{n=0}^\infty U_n\Gamma^{-1}+V_n\log\Gamma + W_n
\end{equation}
where $U_n,V_n,W_n$ are homogeneous of degree $n$ wrt $\rho$.
\begin{prop}
Let $\Delta_+$ be the Hadamard parametrix and $\rho=\frac{1}{2} \nabla_x \Gamma$, then
$\Delta_+$ satisfies the equation:
\begin{equation}
(\rho+2)(\rho+1)\rho^2\Delta_+ \in E_{0}.
\end{equation}
\end{prop}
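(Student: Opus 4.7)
The strategy is to reduce the claim to an algebraic identity in $\rho$, combined with elementary scaling estimates, after splitting $\Delta_+$ into its Hadamard parametrix pieces. First I would record the decomposition
\[
\Delta_+ = U\,\Gamma^{-1} + V\,\log\Gamma + W,
\]
with $U,V,W\in C^\infty(\mathcal{V})$ obtained from the Hadamard--Riesz transport equations and the Borel lemma (this is exactly the input used in Theorem \ref{delta+bounded}). The crucial algebraic facts are that $\rho\Gamma = 2\Gamma$, which follows from Proposition \ref{gammathm}, and hence $\rho\Gamma^{-1} = -2\Gamma^{-1}$ and $\rho\log\Gamma = 2$; and that, by Proposition \ref{Hadamardsimpleform}, $\rho$ coincides in the exponential chart around $d_2$ with the model Euler field $h^j\partial_{h^j}$, so the framework of Chapter 1 applies verbatim.

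Next I would exploit the Leibniz-type identity $P(\rho)(f\,\Gamma^{-1}) = (P(\rho-2)f)\,\Gamma^{-1}$, valid for any polynomial $P$ and any smooth $f$, which follows by induction from $\rho(f\Gamma^{-1}) = ((\rho-2)f)\Gamma^{-1}$. With $P(\rho) = (\rho+2)(\rho+1)\rho^2$ one finds $P(\rho-2) = \rho(\rho-1)(\rho-2)^2$, so
\[
(\rho+2)(\rho+1)\rho^2(U\,\Gamma^{-1}) = \bigl(\rho(\rho-1)(\rho-2)^2\, U\bigr)\,\Gamma^{-1}.
\]
Taylor expanding $U = U_0 + U_1 + U_2 + R_U$ in the exponential chart, with $U_k$ the $k$-th homogeneous component and $R_U$ vanishing at order $3$ on $d_2$, the three factors of $\rho(\rho-1)(\rho-2)^2$ annihilate $U_0$, $U_1$ and $U_2$ respectively, leaving only a smooth function $f := \rho(\rho-1)(\rho-2)^2 R_U$ which still vanishes at order $3$ on $d_2$. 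Since $\Gamma^{-1}\in E_{-2}^\mu$ (Theorem \ref{delta+bounded}) and $f_\lambda = O(\lambda^3)$ uniformly in the $C^\infty$ topology, the scaling identity $(f\Gamma^{-1})_\lambda = f_\lambda\cdot\lambda^{-2}\Gamma^{-1}$ gives $f\Gamma^{-1}\in E_1\subset E_0$.

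For $V\log\Gamma$ the relevant factor is $\rho^2$. Writing $V = V_0 + \widetilde V$ with $V_0 = V|_{d_2}$ (hence $\rho V_0 = 0$) and $\widetilde V$ vanishing on $d_2$, a short Leibniz computation using $\rho\log\Gamma = 2$ yields $\rho^2(V\log\Gamma) = (\rho^2 V)\log\Gamma + 4\,\rho V$. The $V_0$ contribution vanishes identically. For the $\widetilde V$ part both summands lie in $E_0$: the first because $(\widetilde V\log\Gamma)_\lambda = \widetilde V_\lambda\log\Gamma + 2(\log\lambda)\widetilde V_\lambda$ with $\widetilde V_\lambda = O(\lambda)$ in $C^\infty$, so the potentially singular factor $\lambda\log\lambda$ stays bounded as $\lambda\to 0^+$; the second because a smooth function vanishing on $d_2$ lies in $E_1$. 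The smooth remainder $W$ is trivially in $E_0$. Since $\rho$ commutes with scaling it preserves $E_0$, so applying the remaining factors $(\rho+2)(\rho+1)$ to each of the three contributions still lands in $E_0$, and the proposition follows.

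The main obstacle I anticipate is the bookkeeping for the two product estimates $f\Gamma^{-1}\in E_1$ and $\widetilde V\log\Gamma\in E_0$. They require combining the uniform $C^\infty$-boundedness of the families $f_\lambda/\lambda^3$ and $\widetilde V_\lambda/\lambda$ with the microlocal boundedness of the families $(\Gamma^{-1})_\lambda$ and $(\log\Gamma)_\lambda$ established in Chapter 5. The bilinear product estimates of Chapter 4 handle this cleanly once one verifies the H\"ormander condition on wave front sets, which here is automatic since in each product one of the two factors is smooth.
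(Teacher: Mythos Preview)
Your proof is correct and follows essentially the same route as the paper. The paper works directly with the graded expansion $\Delta_+=\sum_n U_n\Gamma^{-1}+V_n\log\Gamma+W_n$ (each $U_n,V_n,W_n$ homogeneous of degree $n$) and observes that the four factors of $(\rho+2)(\rho+1)\rho^2$ annihilate precisely the singular pieces $U_0\Gamma^{-1},U_1\Gamma^{-1},U_2\Gamma^{-1},V_0\log\Gamma$; your shift identity $P(\rho)(f\Gamma^{-1})=(P(\rho-2)f)\Gamma^{-1}$ followed by Taylor expansion of $U$ and $V$ is an equivalent repackaging of the same computation, with the bonus of making the $E_0$-membership of the remainder terms more explicit.
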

\begin{proof}
Notice that if $U_n$ is homogeneous
of degree $n$ since $\Gamma^{-1}$ is homogeneous 
of degree $-2$ then we must have 
$(\rho-n+2)U_n\Gamma^{-1}=0$
and also 
$\rho V_n\log\Gamma=nV_n\log\Gamma+2V_n$
which implies 
$(\rho-n)^2V_n\log\Gamma=2(\rho-n)V_n=0$.
We deduce the system
of equations:
\begin{eqnarray}
(\rho+2)U_0\Gamma^{-1}=0 \\
(\rho+1)U_1\Gamma^{-1}=0 \\
\rho U_2\Gamma^{-1}=0 \\
\rho^2 V_0\log\Gamma=0. 
\end{eqnarray}
Thus if we act on $\sum_{n=0}^\infty U_n\Gamma^{-1}+V_n\log\Gamma + W_n$
by the differential operator
$(\rho+2)(\rho+1)\rho^2$,
the above system
of equations
shows that
we will kill all singular terms
in the sum $\sum_{n=0}^\infty U_n\Gamma^{-1}+V_n\log\Gamma + W_n$.
\end{proof}

From this typical quantum field theoretic example, we understand that it is not possible to find \textbf{constant coefficients} Fuchsian operators that kills exactly the Feynman amplitudes. However, we can kill them with constant coefficients Fuchsian operators modulo an \emph{error term which lives in nicer space} and go on successively.
We define the space $F_\Omega$ of $\textbf{Fuchsian symbols}$ which consists of distributions $t$
having 
asymptotic expansions of the form
$t=\sum_0^\infty t_k$ i.e.
$\exists s\in\mathbb{R}, \forall N, t-\sum_0^N t_k \in E_{s+N}$,
where we used 
the property that the scale spaces $E_s$ are filtered, 
$s^\prime\geqslant s\implies E_{s^\prime}\subset E_s$.
Intuitively, we would say that these are spaces of distributions which are killed by constant coefficients Fuchsian operators modulo an error term which can be made ``arbitrarily nice'', 
the price to pay for a nice error term is that we must use constant coefficients Fuchsian operators   
of arbitrary order.

\paragraph{The meromorphic regularization and the Mellin transform.}

 We modify the extension formula of H\"ormander $\int_0^1 d\lambda \lambda^{-1} t\psi_{\lambda^{-1}}+(1-\chi)t$ and define a regularization of the extension $t^\mu=\int_0^1 d\lambda\lambda^{\mu-1} t\psi_{\lambda^{-1}}$ depending on a parameter $\mu$. We relate the new regularization formula to the Mellin transform. The idea actually goes back to Gelfand who considered Mellin transform of functions averaged on hypersurfaces (see \cite{Gelfand} (4.5) Chapter 3 p.~326 and \cite{AVG} (7.2.1) p.~218).
When $t\in E_s(U\setminus I)$, we prove that $t^\mu$ has an extension in $E_{s+\mu}$ and is holomorphic in $\mu$ for $Re(\mu)$ large enough, intuitively, when $Re(\mu)$ is large enough the integral $\int_0^1 d\lambda\lambda^{\mu-1} t\psi_{\lambda^{-1}}$ has better chances to converge. 
Moreover, we can already prove that if there is any meromorphic extension $\mu\mapsto t^\mu$, then \textbf{the tail of the Laurent series must be local counterterms}.
 Now if we know that $t\in F_\Omega$, which is a much stronger assumption than $t\in E_s$, we then establish a nice identity satisfied by the regularized extension
\begin{eqnarray}\label{identitymeromcontinuation}
\forall N, \left\langle T^\mu,\varphi \right\rangle =\sum_{j\leqslant N}(\mu+j+\Omega)^{-1} \left\langle (T\varphi)_{j} ,\psi    \right\rangle+\left\langle \left(I_N(T\varphi)\right)^\mu,\psi\right\rangle ,
\end{eqnarray}
where $ I_N(T\varphi)
=\frac{1}{N!}\int_0^1 ds(1-s)^N\left(\frac{\partial}{\partial s}\right)^{N+1}s^{-\Omega}\left(T\varphi\right)_s
$ is the remainder of the expansion $\left(T\varphi\right)_s=\sum_{j\leqslant N}s^{j+\Omega}(T\wedge\omega)_j+I_N(T\varphi)_s$,
and we prove that the regularization $\mu\mapsto t^\mu$ can be extended meromorphically in $\mu$ with poles located in $Spec\left(\Omega\right)+\mathbb{N}$. 
We write explicit formulas for the poles of $t^\mu$. 
\paragraph{The Riesz extension.}
To go back to the \textbf{interesting case}, we have to take the limit of $t^\mu$ when $\mu=0$. However, if $\mu=0$ is a pole of finite order of $t^\mu$, then we must remove the tail of the Laurent series which are 
\textbf{local counterterms}, i.e. distributions supported on $I$.
Then we will prove that the operation of meromorphic regularization then removing the poles at $\mu=0$ and finally taking the limit $\mu\rightarrow 0$ defines an \textbf{extension operation} which is called the Riesz extension and is a specific case of all the extensions defined in Chapter 1.  
Then we will show that the Fuchsian symbols renormalized  
by the Riesz extension are still Fuchsian symbols.   
Finally, we will explain
how to introduce a length scale
$\ell$
in the Riesz extension and how the one parameter
renormalization group
emerges in this picture and involves
only polynomials of $\log\ell$.  
%\paragraph{The construction of the residue by three point of views.}
%The poles which appeared in the process of meromorphic regularization will be related to the \textbf{residues} of Chapter 8.
%In general, we propose two different point of views on the theory of residues, the \textbf{current theoretic} point of view of \cite{Schwartz} and \cite{Griffiths} and 
%a more complex analytic point of view a la Gelfand--Shilov \cite{Gelfand} as \textbf{certain poles} appearing in the meromorphic regularization.
\paragraph{Relationship to other works.}
In this Chapter,
we give general definitions
of Fuchsian symbols
which are adapted
to QFT in curved
space times as
we illustrated
in our example.
To our knowledge,
these definitions
were first given 
by Kashiwara--Kawai
\cite{KK}.
They also appear
in the work
of Richard Melrose
\cite{Melroseasympt}.
We undertake
the task of 
meromorphic
regularizing
Fuchsian symbols
which are asymptotic expansions
of a more
general nature
than associate
homogeneous distributions.
\section{Fuchsian symbols.}
In QFT, scalings 
of distributions is not
necessarily homogeneous,
there are $\log$ terms. 
Distributions encountered in QFT
are not solutions of equations
of the form
$(\rho-d)t=0$
but they might be 
solutions
of equations
of the form
$(\rho-d)^nt=0$. 
We work in flat space $\mathbb{R}^{n+d}$ with coordinates $(x,h)\in\mathbb{R}^n\times\mathbb{R}^d$ and where $I=\{h=0\}$.
The scaling is defined by the Euler vector field $\rho=h^j\partial_{h^j}$. 
\subsection{Constant coefficients Fuchsian operators.}

Given the canonical Euler vector field $\rho$, 
we give a simple definition of a \textbf{constant coefficient} Fuchsian differential operator of order $n$:
\begin{defi}
A \textbf{constant coefficient Fuchsian operator} of degree $n$ is an operator of the form $b(\rho)$ where
$b\in \mathbb{C}[X]$ is a polynomial of degree $n$
with \textbf{real roots}.  
\end{defi}  
In QFT, these roots will often be integers.
\begin{ex}
Consider the one variable case where $\rho=h\frac{d}{dh}$. 
The monomial $h^d$ is solution of the equation
$(\rho-d)h^d=0$, hence $b(X)=(X-d)$.
On the other hand
$\log h$ is solution of the equation
$\rho^2\log h=0$ hence $b(X)=X^2$.
Lastly,
$h^d\log h$ is solution of the
equation
$(\rho-d)^2h^d\log h=0$.
\end{ex}  
Next define first order \textbf{constant coefficient} 
Fuchsian operators of rank $n$:
\begin{defi}
A Fuchsian system of rank $n$ is a differential operator of the form $P=\rho-\Omega $  
where $\Omega=\left(\omega_{ij}\right)_{1\leqslant ij\leqslant n}\in M_n(\mathbb{C}) $ is a \textbf{constant} $n\times n$ matrix
with real eigenvalues.     
\end{defi}  
\begin{ex}
The column 
$\left(\begin{array}{c} \log h \\
1 \end{array}\right)$
is solution
of the system
$$\rho \left(\begin{array}{c} \log h \\
1 \end{array}\right)=\left(\begin{array}{cc} 0 & 1 \\
0 & 0\end{array}\right)\left(\begin{array}{c} \log h \\
1 \end{array}\right) $$
\end{ex}
Let $U$ be an arbitrary open domain which is $\rho$-convex.
For $b$ a $n$-th order operator (resp $P=\rho-\Omega$ a system), we give a fairly general definition of some new subspaces $F_b(U)$ (resp $F_\Omega(U)$) which are associated to the differential operators $b$ (resp $P$) and which are different from the space $E_s(U)$ defined by Yves Meyer.
However their definition uses the spaces $E_s(U)$ defined by Meyer.
We define the space 
$F_b(U)$ of $\textbf{Fuchsian symbols}$ 
associated to a Fuchsian operator $b$: 
\begin{defi} 
Let $b(\rho)$ be a constant coefficients Fuchsian differential operator of order $n$. 
Then the space $F_b(U)$ of Fuchsian symbols is defined as the space distributions 
$t$ s.t. there exists 
some neighborhood $V$ of $I\cap \overline{U}$ 
and a sequence $(t_k)_k$ of distributions
such that
\begin{eqnarray} 
\forall N, t=\sum_{k=0}^N t_k+ R_N
\\ \forall k, b(\rho-k)t_k|_V=0
\end{eqnarray}
where $\forall N, R_N\in E_{s+N+1}(U)$, $s=\inf Spec(b)$.
\end{defi}
\begin{ex}
Let us consider the series $\sum_{k=0}^\infty a_k h^{d+k}$, then each term $a_k h^{d+k}$ is killed
by the operator $(\rho-d-k)$.
\end{ex}
\begin{defi}
 Let $ \Omega=\left(\omega_{ij}\right)_{1\leqslant ij\leqslant n}\in M_n(\mathbb{C}) $ be a $n\times n$ matrix and $P=\rho-\Omega$ be a Fuchsian operator of first order and rank $n$. Then the space of Fuchsian symbols $F_\Omega(U)$ 
is the space of vector valued distributions $t=(t_i)_{1\leqslant i\leqslant n}$
such that there exists 
some neighborhood $V$ of $I\cap \overline{U}$ 
and a sequence $(t_k)_k$ of distributions
such that
\begin{eqnarray} 
\forall N, t=\sum_{k=0}^N t_k+ R_N
\\ \forall k, \left(\rho-(\Omega+k)\right)t_k|_V=0
\end{eqnarray}
where $\forall N, R_N\in E_{s+N+1}(U)$, $s=\inf Spec(\Omega)$.
\end{defi}
\paragraph{Some remarks on scalings.}
Assume $t\in F_\Omega$.
Notice that for all test functions $\varphi$, the function $\lambda\mapsto\lambda^{-\Omega}\left\langle t_\lambda,\varphi \right\rangle$ is smooth
in $(0,1]$ since $\left\langle t_\lambda,\varphi \right\rangle=\lambda^{-d}\left\langle t,\varphi_{\lambda^{-1}} \right\rangle$ and has a \textbf{unique asymptotic expansion} at $\lambda=0$,
$$\lambda^{-\Omega}\left\langle t_\lambda,\varphi \right\rangle\sim \sum_{k=0}^\infty \lambda^k \left\langle t_k,\varphi \right\rangle  .$$
But this does not mean that $\lambda\mapsto \lambda^{-\Omega}\left\langle t_\lambda,\varphi \right\rangle$ is smooth at $\lambda=0$
as the following counterexample illustrates:
\begin{ex}\label{exHeleinmoi}
The function $f(\lambda)=e^{\frac{-1}{\lambda^2}}\sin(e^{\frac{1}{\lambda^2}})$ has asymptotic expansion $e^{\frac{-1}{\lambda^2}}\sin(e^{\frac{1}{\lambda^2}})\sim 0$ and is smooth in $(0,1]$, however it 
is not smooth in $[0,1]$ since the first derivative of this function does not converge to zero when
$\lambda\rightarrow 0$. 
\end{ex}
However, we have a condition
which implies the smoothness
on $[0,1]$:
\begin{lemm}
Let $\lambda\mapsto f(\lambda)$ be a function which is smooth on $(0,1]$ and which has an asymptotic expansion
at $\lambda=0$. Then if $\forall n$, $f^{(n)}$ has asymptotic expansion at $0$ which is obtained by formally differentiating $n$ times the expansion of $f$ then $f$ extends smoothly at $\lambda=0$.
\end{lemm}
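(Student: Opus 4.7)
The plan is to first extract from the hypothesis the existence of the limits $\lim_{\lambda \to 0^+} f^{(n)}(\lambda)$ for every $n \geq 0$, and then invoke a standard iterated application of the fundamental theorem of calculus to promote these one-sided limits to genuine right-derivatives at $\lambda = 0$. Write the asymptotic expansion of $f$ at $0$ as $f(\lambda) \sim \sum_{k \geq 0} a_k \lambda^k$. The hypothesis that $f^{(n)}$ has the formally differentiated expansion means
\begin{equation*}
f^{(n)}(\lambda) \sim \sum_{k \geq n} k(k-1)\cdots(k-n+1)\, a_k\, \lambda^{k-n},
\end{equation*}
whose zeroth order term is $n!\, a_n$. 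In particular $\lim_{\lambda\to 0^+} f^{(n)}(\lambda) = n!\, a_n =: L_n$ exists for every $n$.

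Next I would define $\tilde{f}(0) = L_0$ and $\tilde{f}(\lambda) = f(\lambda)$ for $\lambda \in (0,1]$, and prove by induction on $n$ that $\tilde{f} \in C^n([0,1])$ with $\tilde{f}^{(n)}(0) = L_n$. The base case $n=0$ is just continuity, which follows from $\lim_{\lambda\to 0^+} f(\lambda) = L_0$. For the inductive step, assume $\tilde{f}\in C^n([0,1])$ with $\tilde{f}^{(n)}(0) = L_n$. Since $f^{(n+1)}$ is continuous on $(0,1]$ and has a finite limit $L_{n+1}$ at $0$, it is bounded in a neighborhood of $0$, hence integrable on $(0,\lambda]$ for small $\lambda>0$. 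The fundamental theorem of calculus gives, for $0 < \mu < \lambda$,
\begin{equation*}
f^{(n)}(\lambda) - f^{(n)}(\mu) = \int_\mu^\lambda f^{(n+1)}(t)\, dt,
\end{equation*}
and letting $\mu \to 0^+$ (using the limit $f^{(n)}(\mu)\to L_n$ together with dominated or monotone convergence on the right-hand side) yields
\begin{equation*}
f^{(n)}(\lambda) - L_n = \int_0^\lambda f^{(n+1)}(t)\, dt.
\end{equation*}
Dividing by $\lambda$ and using $f^{(n+1)}(t) \to L_{n+1}$ as $t \to 0^+$ shows $\tfrac{f^{(n)}(\lambda)-L_n}{\lambda} \to L_{n+1}$, so $\tilde{f}^{(n)}$ is differentiable at $0$ from the right with derivative $L_{n+1}$; continuity of $\tilde{f}^{(n+1)}$ at $0$ follows from the existence of the limit $L_{n+1}$. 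This completes the induction, so $\tilde{f} \in C^\infty([0,1])$.

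The only delicate point is the interchange of the limit $\mu \to 0^+$ with the integral defining the Taylor-type remainder, which is why the hypothesis on \emph{all} derivatives is essential: Example \ref{exHeleinmoi} shows that if even one $f^{(n)}$ fails to have a limit at $0$, the function can oscillate wildly and still admit the trivial asymptotic expansion $f \sim 0$. The hypothesis that the expansions of all $f^{(n)}$ are obtained by formal term-by-term differentiation precisely rules out such oscillations by forcing each derivative to be bounded (in fact convergent) near $0$, which is exactly what is needed to run the induction.
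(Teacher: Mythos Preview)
Your argument is correct. The paper does not supply its own proof but simply cites \cite{Godement}, lemme~1 p.~120; your induction via the fundamental theorem of calculus (or equivalently the mean value theorem applied to $\tilde f^{(n)}$ on $[0,\lambda]$) is precisely the classical proof one finds there, so there is nothing to compare.
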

The proof can be found in \cite{Godement} lemme 1 p.~120.

We want to remind the reader there is a standard way to go from Fuchsian differential operators of order $n$ to $1$st order Fuchsian systems of rank $n$, this is called the companion system (see \cite{Yakovenko} 19B p.~332, 19E p.~342 for this classical construction).
\paragraph{Asymptotic expansions.}

 We explain the connection with asymptotic expansions of distributions.
\begin{defi} 
The distribution $t$ 
admits an asymptotic expansion if $t\in E_s(U)$ and
if 
there exists a strictly increasing 
sequence of real numbers $(s_i)_i$ such that $s\leq s_0$   
\begin{eqnarray}
\exists (t_i)_i, t_i \in E_{s_i}(U)
\\ \forall N, \left(t-\sum_{i=1}^{N} t_i\right) \in E_{s_{N+1}}(U)
\end{eqnarray}
\end{defi}

 In concrete applications, the sequence $(s_i)_i$ 
is equal to $A+\mathbb{N}$
where 
$A$ is a finite set of real numbers.
So we see that our space of Fuchsian symbols
is just a subspace of the space
of distributions having asymptotic expansions.
However, these spaces are less general than 
the spaces $E_s$ defined by Yves Meyer
as we shall illustrate in the following example
\begin{ex}
$\sin(\frac{1}{x})$ is weakly homogeneous of degree $0$ on $\mathbb{R}$, thus it lives in $E_0(\mathbb{R})$. However, it admits no asymptotic expansion ! 
\end{ex}
We want to insist on the fact that our spaces $F_\Omega$
are defined in the smooth category
and does not require any analyticity
hypothesis.

\subsection{Fuchsian symbols currents.}

 For a given Fuchsian operator $P=\rho-\Omega$ of first order and rank $n$,
$F_\Omega(U)$ is the space of vector valued currents $T$ such that there exists a sequence $(T_k)_k$ of distributions such that in a certain neighborhood $V$ of $I\cap U$
 \begin{eqnarray} 
\forall N, T=\sum_{k=0}^N T_k+ R_N
\\ \forall k, \left(\rho-(\Omega+k)\right)T_k=0
\end{eqnarray}
where $\forall N, R_N\in E_{s+N+1}(U)$, $s=\inf Spec(\Omega)$.
Recall also that we are able to decompose test forms $\omega$ as a sum 
$$\omega=\sum_{n=0}^m \omega_n + I_m(\omega) $$
where the $\omega_n$ are homogeneous of degree $n$.
 
 Notice that for any compactly supported test form $\omega$, the exterior product $T\wedge \omega$ is a Fuchsian symbol and $T_k\wedge\omega_n$ satisfies the following exact equation:
\begin{equation} 
\rho \left(T_k\omega_{n}\right)= \left(n+k+\Omega\right)T_k\omega_{n}. 
\end{equation}

\paragraph{On the relationship with 
the standard notion
of Fuchsian differential equations.}
The
theory
of Fuchsian
differential
equation
has an old story
which goes back
to great names
such as
Poincar\'e,
Riemann and 
Fuchs.
More recently,
there was
a resurgence
of activities
around these
equations in the context of
PDE's
with famous 
works 
in analysis 
by 
Malgrange,
Kashiwara,
Leray, Pham.
Some very nice
surveys and textbooks
now exist
on the subjects,
and our work
is
particularly 
inspired
by 
(\cite{Pham,Zoladek,AVG,Yakovenko,YN})
which give very nice expositions
of this topic.
Distributions
solution
to Fuchsian 
differential
operators
have several names.
They were called
`associate
homogeneous
distributions''
by \cite{Gelfand}.
 
These distributions 
are also 
called
``hyperfunctions of the Nilsson Class''
by Pham \cite{Pham},
for instance 
a similar proof
of Proposition (3.2) p.~18 
in \cite{TNS}
can be found in \cite{Pham} p.~153,154.

\subsection{The solution of a variable coefficients Fuchsian equation is a Fuchsian symbol.}

 The idea is that we want to deal with perturbations of the Euler equation $(\rho-\Omega)t=0$ where $\Omega$ is a constant matrix. Let $\mathcal{I}\subset C^\infty(M)$ denote the ideal of smooth functions vanishing on $I$.
Let $\tilde{\Omega}$ be a perturbation of $\Omega$:  $\tilde{\Omega}-\Omega\in M_n\left(\mathcal{I} \right)$,
note that this implies $\tilde{\Omega}|_I$ is constant
and equals $\Omega$. 
We are then 
able to prove that solutions of the Fuchsian operator with variable coefficients $P=\rho-\Omega$ are Fuchsian symbols. The space of Fuchsian symbols is thus the natural space of solutions of perturbed Euler equation.

Let us work in a local chart in $\mathbb{R}^{n+d}$ with coordinates $(x,h)$ where $I=\{h=0\}$
and $\rho=h^j\frac{\partial}{\partial h^j}$. Let $P=\rho-\tilde{\Omega}$ where $\tilde{\Omega}\in \Omega + M_n\left(\mathcal{I} \right)$ and $\rho-\Omega $  
is a first order Fuchsian system of rank $n$ with constant coefficients.

For any complex number $\lambda$ and matrix $\Omega$,
we define $\lambda^\Omega$ by the equation
$$\lambda^\Omega=\exp(\log\lambda\Omega) $$
for the branch $0\leqslant \arg\log<2\pi$ of the logarithm.

\begin{ex}
Before we state and prove the theorem, let us give an example in the holomorphic case on $\mathbb{C}$. Assume $t(z)$ is holomorphic in $\mathbb{C}\setminus \{0\}$
and solves the equation $z\frac{d}{dz}t-(\Omega-zh(z))t=0$ where $h$ is holomorphic in a neighborhood of $\{0\}$.
Then $f(z)=z^{\Omega}t(z)$ solves the equation $z\frac{d}{dz}f-zh(z)f=0\implies \frac{d}{dz}f-h(z)f=0$. But this means 
that $f(z)=e^{\int_{z_0}^z h(t)dt} f(z_0)$ is \textbf{holomorphic} in a neighborhood of zero. Hence by the principle of analytic continuation, we can extend the function $f$ holomorphically at $0$ ! Finally, $t(z)=\sum_{k=0}^\infty \frac{1}{k!} f^{(k)}(0)z^{k+\Omega}$ has the asymptotic expansion of Fuchsian symbols.
\end{ex}
However, in contrast
with the previous example our
theorem
\textbf{does not assume
any hypothesis
of analyticity}
since our perturbed
operator $\rho-\Omega$
is an operator
with smooth coefficients.
\begin{thm}
Let $\tilde{\Omega}\in M_n(C^\infty(M))$ s.t. there exists $\Omega\in M_n(\mathbb{C})$
with real roots satisfying $\tilde{\Omega}-\Omega|_I=0$. 
If $t\in \mathcal{D}^\prime(U\setminus I)$ is a solution of the equation $(\rho-\tilde{\Omega})t=0$ then $t$ is a Fuchsian symbol in the space $F_{\Omega}(U\setminus I)$ and $t=\sum_0^\infty t_k$
where $(t_k)_\lambda=\lambda^{\Omega+k}t_k$.
\end{thm}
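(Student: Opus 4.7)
The approach is to imitate the holomorphic example that opens the chapter, replacing convergence of power series by asymptotic expansions in the scale spaces $E_s$ of Chapter 1.

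First I reduce to the local coordinate situation where $\rho = h^j \partial_{h^j}$ and $I = \{h = 0\}$, using the normal form theorem for Euler vector fields (Corollary \ref{conjugcoro}). Writing $\tilde{\Omega} = \Omega + V$ with $V \in M_n(\mathcal{I})$, the equation $(\rho - \tilde{\Omega})t = 0$ becomes $(\rho - \Omega)t = Vt$. Since $\rho$ commutes with scaling and $(\rho t)_\lambda = \lambda \partial_\lambda t_\lambda$, this PDE is equivalent to the ODE
$$\lambda \partial_\lambda t_\lambda = (\Omega + V_\lambda)\, t_\lambda$$
in the distribution-valued variable $\lambda \mapsto t_\lambda$.

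Next, following the holomorphic model exactly, I would set $u(\lambda) = \lambda^{-\Omega} t_\lambda$ and compute the reduced equation
$$\lambda \partial_\lambda u(\lambda) = B(\lambda)\, u(\lambda), \qquad B(\lambda) := \lambda^{-\Omega} V_\lambda \lambda^{\Omega}.$$
By Hadamard's lemma $V = h^j V_j$ with $V_j$ smooth, so $V_\lambda = \lambda\, W(\lambda)$ with $W$ jointly smooth in $(\lambda, x, h)$. In a Jordan basis for $\Omega$ with real eigenvalues $\mu_i$, the entries of $B(\lambda)$ are linear combinations of $\lambda^{1 + \mu_j - \mu_i}(\log\lambda)^p$ times smooth functions, and $B$ admits a full asymptotic expansion $B(\lambda) \sim \sum_{k \geq 1} \lambda^k P_k(\log\lambda)$ where the $P_k$ are matrix-valued polynomials in $\log\lambda$ of degree bounded by the size of the largest Jordan block of $\Omega$. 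Injecting the ansatz $u(\lambda) \sim \sum_{k \geq 0} \lambda^k u_k$ and matching powers of $\lambda$ yields the triangular recursion $k\, u_k = \sum_{j=1}^{k} P_j\, u_{k-j}$, which has a unique solution once $u_0$ is fixed; the real-spectrum hypothesis on $\Omega$ guarantees that no genuine non-integer powers appear. Translating back through $t_\lambda = \lambda^\Omega u(\lambda)$, each $u_k$ identifies with a vector-valued distribution $t_k$ satisfying $(t_k)_\lambda = \lambda^{\Omega + k} t_k$, equivalently $(\rho - (\Omega + k))t_k = 0$, which is exactly the Fuchsian-symbol term structure.

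The hard part will be promoting the formal expansion to a genuine asymptotic expansion in the scale spaces $E_s$: for every $N$ the remainder $R_N = t - \sum_{k=0}^{N} t_k$ must lie in $E_{s+N+1}(U \setminus I)$ with $s = \inf \mathrm{Spec}(\Omega)$, up to the logarithmic slack already tolerated by Propositions \ref{scal2} and \ref{scal3noninteger}. The plan is to rerun the reduced ODE tested against a fixed test form $\varphi$: since $B(\lambda) = O(\lambda (\log \lambda)^p)$, the scalar equation $\partial_\lambda \langle u(\lambda), \varphi\rangle = \lambda^{-1} \langle B(\lambda) u(\lambda), \varphi\rangle$ is regular at $\lambda = 0$, so by Cauchy--Lipschitz $\lambda \mapsto \langle u(\lambda), \varphi\rangle$ extends smoothly to $[0, 1]$ and its Taylor expansion at $0$ matches the formal series term by term. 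Taylor's formula with integral remainder at order $N+1$ then provides a pointwise-in-$\varphi$ bound $|\langle u(\lambda) - \sum_{k \leq N} \lambda^k u_k, \varphi\rangle| \leq C(\varphi)\, \lambda^{N+1}$. The final step is to upgrade this pointwise bound to a uniform one in $\varphi$; this is achieved via the Banach--Steinhaus uniform boundedness principle recalled in the appendix of Chapter 1, applied to the bounded family $\{\lambda^{-(N+1)}(u(\lambda) - \sum_{k \leq N} \lambda^k u_k)\}_{\lambda \in (0,1]}$. Multiplying back by $\lambda^\Omega$ and using the real-spectrum hypothesis to absorb the log-polynomial factors yields the required remainder estimate $R_N \in E_{s + N + 1}$, completing the proof that $t \in F_\Omega(U \setminus I)$.
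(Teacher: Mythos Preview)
Your overall strategy --- set $u(\lambda)=\lambda^{-\Omega}t_\lambda$, show it extends smoothly to $\lambda=0$, then Taylor expand to read off the $t_k$ --- is exactly the paper's. The divergence is in how smoothness at $\lambda=0$ is established, and there your argument has a genuine gap.

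The step ``test against a fixed $\varphi$ and apply Cauchy--Lipschitz'' does not work: the relation $\partial_\lambda\langle u(\lambda),\varphi\rangle=\lambda^{-1}\langle B(\lambda)u(\lambda),\varphi\rangle$ is \emph{not} a closed scalar ODE for the single unknown $\langle u(\lambda),\varphi\rangle$, because $B(\lambda)$ is a matrix of smooth functions of the space variables $(x,h)$. Hence $\langle B(\lambda)u(\lambda),\varphi\rangle=\sum_{i,j}\langle u_j(\lambda),B_{ij}(\lambda)\varphi_i\rangle$ depends on $u$ through infinitely many other pairings, and no one--dimensional existence or extension theorem applies. Without a pointwise-in-$\varphi$ bound, your Banach--Steinhaus step has nothing to act on. The paper's remedy is to abandon pointwise testing altogether and run the ODE directly in the Banach space $(C^m(K))'$, the dual of $C^m$ on a compact set where $\chi=1$, with $m$ the order of $t\chi$. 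There the equation $\partial_\lambda f_\lambda=\lambda^{-1}(\tilde\Omega-\Omega)_\lambda\, f_\lambda$ is a genuine Banach-valued linear ODE whose coefficient $(\tilde\Omega-\Omega)_\lambda/\lambda$ is smooth on $[0,1]$ (since $\tilde\Omega-\Omega\in M_n(\mathcal I)$), Gronwall's inequality in that Banach norm prevents blow-up at $\lambda=0$, and smoothness of the coefficient then gives smoothness of $f_\lambda$ on all of $[0,1]$. This single move replaces your formal-recursion\,+\,testing\,+\,Banach--Steinhaus chain.

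A secondary issue: your claimed expansion $B(\lambda)\sim\sum_{k\geq 1}\lambda^k P_k(\log\lambda)$ is false in general. Conjugation by $\lambda^{-\Omega}$ produces factors $\lambda^{\mu_j-\mu_i}$ which need be neither integers nor positive, so $B(\lambda)$ can blow up at $0$ and the triangular recursion $k\,u_k=\sum_j P_j u_{k-j}$ is not well-posed. The paper sidesteps this by writing the ODE with the \emph{unconjugated} coefficient $(\tilde\Omega-\Omega)_\lambda/\lambda$; strictly this form requires $\lambda^{-\Omega}$ to commute with $\tilde\Omega_\lambda$, a point the paper does not address, but it is at least a coherent route to the Gronwall estimate, which your conjugated $B(\lambda)$ is not.
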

\begin{proof}
The idea consists in proving that
$\lambda^{-\Omega}  t_\lambda$ 
is smooth in $\lambda$, 
then the Taylor expansion about $\lambda=0$ 
of $\lambda^{-\Omega}  t_\lambda$
will give us the expansion as Fuchsian symbol.
We restrict to a set $K^\prime=\{(x,h)| \vert h\vert\leqslant R\}$ which is stable by scaling.
We can pick a function $\chi$ which vanishes outside a compact neighborhood $K$ of $K^\prime$, $\chi|_{K^\prime}=1$, then the distribution $t\chi$ equals $t$ on $K^\prime$ and is an element of the dual space $(C^m(K))^\prime$ of the \emph{Banach space} $C^m(K)$ where $m$ is the order of the distribution $t$ (see Eskin theorem $6.4$ page $22$).
The \emph{topological dual} $(C^m(K))^\prime$ of the Banach space $C^m(K)$ is also a \emph{Banach
space} for the operator norm.  
We want to prove that
$\Vert\lambda^{-\Omega}  t_\lambda\chi\Vert_{\left(C^m(K)\right)^\prime}$ 
is bounded 
for the Banach space norm 
$\Vert . \Vert_{\left(C^m(K)\right)^\prime}$ 
of $\left(C^m(K)\right)^\prime$
and we also want to prove that the map
$\lambda\mapsto \lambda^{-\Omega}  t_\lambda\chi$ is a smooth map
for $\lambda\in[0,1]$ with value in the \emph{Banach space} $(C^m(K))^\prime$.  
We must precise the regularity of $\lambda^{-\Omega}  t_\lambda\chi$ in $\lambda\in(0,1]$.
From the identity 
$\left\langle  t_{\lambda}\chi,\varphi\right\rangle= \left\langle t,\chi_{\lambda^{-1}}\varphi_{\lambda^{-1}}\right\rangle$
we can easily prove the 
$C^0$ regularity on $\lambda\in(0,1]$ 
with value distribution of order $m$.
Then the derivative in $\lambda$ is given by the formula
$\partial_\lambda\left( \lambda^{-\Omega}  t_\lambda\chi\right)=\lambda^{-1-\Omega} \left((\rho-\Omega) t_\lambda\right)\chi$
where $\left(\rho t_\lambda\right)\chi$ is of order $m+1$.
This implies $\lambda\in(0,1]\mapsto \lambda^{-\Omega}  t_\lambda\chi\in C^1\left((0,1],(C^{m+1}(K))^\prime\right)$ then by recursion $\lambda\in(0,1]\mapsto \lambda^{-\Omega}  t_\lambda\chi\in C^k\left((0,1],(C^{m+k}(K))^\prime\right)$ where $t$ is a distribution of order $m$.
We see that at each time we increase the order of regularity in $\lambda$ of one unit, we lose regularity of  
$\lambda^{-\Omega}  t_\lambda\chi$ as a compactly supported distribution. For the moment, we know  $\lambda^{-\Omega}t_\lambda$ is smooth in $\lambda\in(0,1]$ with value distribution but the difficulty is to prove that there is no blow up at $\lambda=0$ and that it has a $C^\infty$ extension for $\lambda\in[0,1]$. The idea is to exploit the fact it satisfies a differential equation and use a version of the Gronwall lemma for Banach space valued ODE.
$f_\lambda=\lambda^{-\Omega}  t_\lambda\chi$ is a solution of the linear ODE
\begin{equation}\label{FuchsODE} 
\frac{d}{d\lambda}f_\lambda=\frac{\left(\tilde{\Omega}
-\Omega\right)_\lambda}{\lambda}f_\lambda, f_1= t\chi 
\end{equation}
where $\frac{\left(\tilde{\Omega}-\Omega\right)_\lambda}{\lambda}=\frac{e^{\log\lambda\rho\star}\left(\tilde{\Omega}-\Omega\right)}{\lambda}$ is smooth in $(\lambda,x,h)\in[0,1]\times \mathbb{R}^{n+d}$ since $\tilde{\Omega}-\Omega\in M_n(\mathcal{I})$. 
We want to prove that
there is no blow up at $\lambda=0$ which would give a unique
extension of $\lambda^{-\Omega}  t_\lambda\chi$ to $\lambda\in[0,1]$ by ODE uniqueness.
We notice that there exists a constant $C$ such that
$$\forall\lambda\in[0,1], \Vert\frac{\left(\tilde{\Omega}-\Omega\right)_\lambda}{\lambda}\lambda^{-\Omega}  t_\lambda\chi\Vert_{\left(C^m(K)\right)^\prime}\leqslant C \Vert\lambda^{-\Omega}  t_\lambda\chi\Vert_{\left(C^m(K)\right)^\prime}$$
since $\tilde{\Omega}-\Omega\in M_n(\mathcal{I})$ which means $\left(\tilde{\Omega}-\Omega\right)_\lambda=O(\lambda)$ and $\frac{\left(\tilde{\Omega}-\Omega\right)_\lambda}{\lambda}$ is bounded in $\lambda$ in the space of smooth functions for usual $C^\infty$ topology. 
Actually, we only need the simple estimate
$\forall \lambda\in[0,1], \sup_{\lambda\in[0,1]} \Vert\frac{\left(\tilde{\Omega}-\Omega\right)_\lambda}{\lambda} \Vert_{C^m(K)}<\infty $, thus  
$$f_\tau=f_1+\int_1^\tau d\lambda\frac{\left(\tilde{\Omega}-\Omega\right)_\lambda}{\lambda}f_\lambda   $$
and 
$$\Vert f_\tau\Vert_{\left(C^m(K)\right)^\prime}\leqslant \Vert f_1\Vert_{\left(C^m(K)\right)^\prime}+\Vert\int_1^\tau d\lambda\frac{\left(\tilde{\Omega}-\Omega\right)_\lambda}{\lambda}  f_\lambda \Vert_{\left(C^m(K)\right)^\prime}$$ 
by the triangle inequality
$$\Vert f_\tau\Vert_{\left(C^m(K)\right)^\prime}\leqslant  \Vert f_1\Vert_{\left(C^m(K)\right)^\prime}+\int_\tau^1 d\lambda\Vert \frac{\left(\tilde{\Omega}-\Omega\right)_\lambda}{\lambda}f_\lambda \Vert_{\left(C^m(K)\right)^\prime}  $$
by Minkowski inequality
$$\Vert f_\tau\Vert_{\left(C^m(K)\right)^\prime}\leqslant  \Vert f_1\Vert_{\left(C^m(K)\right)^\prime}+C\int_\tau^1 d\lambda \Vert f_\lambda \Vert_{\left(C^m(K)\right)^\prime}  $$
and we can conclude by an application of the Gronwall lemma.  
We deduce that $\forall \lambda\in [0,1], \Vert f_\lambda\Vert_{\left(C^m(K)\right)^\prime}\leqslant e^{C(1-\lambda)}\Vert f_1 \Vert_{\left(C^m(K)\right)^\prime}$. 
Hence $f_\lambda$ exists on $[0,1]$ (for more on Gronwall see \cite{Tao} Theorem 1.17 p.~14) otherwise there would be blow up at $\lambda=0$ but the Gronwall lemma prevents $f_\lambda$ from blowing up at $\lambda=0$. Since the ODE (\ref{FuchsODE}) has smooth coefficients the value
of its solution is smooth in $\lambda$.
%The application of the Gronwall lemma was just an a priori estimate assuming $\lambda^{-\Omega_0}t_\lambda$ was a solution of the ODE. 
%To prove the smoothness property in $\lambda$, we use a classical bootstrap argument. 
%We start from the fact that $\lambda\mapsto \lambda^{-\Omega_0}t_\lambda\in C^0\left((0,1],(C^m(K))^\prime\right)\cap L^\infty\left([0,1],(C^m(K))^\prime\right)$ and using the integral version of the ODE, we find: 
% $$\underset{C^{k+1}(0,1]\cap L^{k+1,\infty}[0,1]}{\underbrace{\tau^{-\Omega_0}  t_\tau}}=t+\int_1^\tau d\lambda\underset{C^k(0,1]\cap L^{k,\infty}[0,1] }{\underbrace{\frac{\left(\tilde{\Omega}-\Omega\right)_\lambda}{\lambda}\lambda^{-\Omega_0}  t_\lambda}} $$
%(where $L^{k,\infty}[0,1]$ denotes the space of functions with bounded $k$-th derivatives) 
%we deduce that  
%$\lambda\mapsto \lambda^{-\Omega_0}t_\lambda\in C^\infty\left([0,1],(C^m(K))^\prime\right)$.
To conclude, we Taylor expand $\lambda^{-\Omega}t_\lambda\chi$ in $\lambda$
$$\lambda^{-\Omega}t_\lambda\chi=\sum_{k=0}^\infty \frac{\lambda^k}{k!} u_k $$
hence using $\chi|_K=1$: $$t_\lambda|_K=\sum_{k=0}^\infty \frac{\lambda^{k+\Omega}}{k!} u_k|_K .$$
Hence we deduce the conclusion with $t_k|_K=\frac{u_k}{k!}$. 
\end{proof}

\subsection{Stability of the concept of approximate Fuchsians.}
 
 First, the space $F_\Omega$ is stable by left product with elements in $C^\infty(M)$, the proof is simple by Taylor expanding the smooth function.
%Thus it is stable by the left action of differential operators in $\mathbb{M}$ (the proof is easy when we work in local coordinates). 
Let $G$ be the space of diffeomorphisms of $M$ \textbf{fixing} $I$.
Before we end this section, let us prove a theorem which shows that the space $F_\Omega(U)$ of Fuchsian symbols is stable by action of $G$. This
result will imply that
$F_\Omega(U)$ does not depend on the
choice of Euler $\rho$. Before proving the theorem we give some useful lemmas:
\begin{lemm}
Let $\Phi(\lambda)=S(\lambda)^{-1}\circ 
\Phi\circ S(\lambda)$ where 
$S(\lambda)=e^{\log\lambda\rho}$ 
and $\Phi=e^{X}$ for some vector field $X$
which vanishes on $I$.
Then $\Phi(\lambda)$ 
is smooth in $\lambda\in[0,1]$
and 
$\Phi(0)$ is a
diffeomorphism fixing
$I$ which commutes 
with $\rho$
and $\Phi,\Phi(0)$ 
have the same 1-jet on $I$.
\end{lemm}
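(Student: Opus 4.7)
My plan is to reduce the whole statement to a direct analysis of a one-parameter family of vector fields. Since conjugation by a diffeomorphism commutes with exponentiation, I would first write $\Phi(\lambda) = e^{X(\lambda)}$ with $X(\lambda) := (S(\lambda)^{-1})_\star X$. Then the three claims about $\Phi(\lambda)$ all translate into corresponding claims about the family $\lambda \mapsto X(\lambda)$: smoothness of $X(\lambda)$ on $[0,1]$ gives smoothness of $\Phi(\lambda)$ by standard smooth dependence of flows on parameters (exactly as in Prop.~\ref{propositionvariablefamily}); the commutation $[\rho, X(0)] = 0$ propagates to $\Phi(0) \circ S(\mu) = S(\mu) \circ \Phi(0)$ by uniqueness of flows; and the equality of 1-jets along $I$ follows once the formulas for $X(\lambda)$ and $X(0)$ are sufficiently explicit.

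The computation of $X(\lambda)$ is the central step. In a local chart $(x,h)$ adapted to $I=\{h=0\}$ and with $\rho = h^j\partial_{h^j}$, the hypothesis that $X$ vanishes on $I$ combined with Hadamard's lemma yields a decomposition
\[
X = h^i A_i^\mu(x,h)\,\partial_{x^\mu} + h^i B_i^j(x,h)\,\partial_{h^j}.
\]
Using the pushforward formulas $S(\lambda)^\ast h^j = \lambda h^j$, $(S(\lambda^{-1}))_\star \partial_{x^\mu} = \partial_{x^\mu}$, $(S(\lambda^{-1}))_\star \partial_{h^j} = \lambda^{-1}\partial_{h^j}$ already used in Prop.~\ref{propositionvariablefamily}, I obtain
\[
X(\lambda) = \lambda\, h^i A_i^\mu(x,\lambda h)\,\partial_{x^\mu} + h^i B_i^j(x,\lambda h)\,\partial_{h^j},
\]
which is manifestly smooth in $(\lambda,x,h)\in [0,1]\times\mathbb{R}^{n+d}$. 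This is the algebraic heart of the lemma: the potentially dangerous factor $\lambda^{-1}$ from $(S(\lambda^{-1}))_\star \partial_{h^j}$ is cancelled exactly by the factor $h^i$ provided by Hadamard. Passing to $\lambda = 0$ gives the limit field $X(0) = h^i B_i^j(x,0)\,\partial_{h^j}$, which is linear in $h$ with coefficients depending only on $x$; the bracket $[\rho, X(0)]=0$ is then a one-line check, and $X(0)|_I = 0$ shows that $\Phi(0)$ fixes $I$.

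For the 1-jet statement, I would simply compare $X$ and $X(0)$ along $I$ using the explicit decomposition above: their difference is $h^i A_i^\mu(x,h)\,\partial_{x^\mu} + h^i\bigl(B_i^j(x,h)-B_i^j(x,0)\bigr)\partial_{h^j}$, both terms of which vanish on $I$, so $X \equiv X(0)$ on $I$, and the induced linearization of $X$ on the normal bundle $N(I\subset M)$ is the constant-in-$h$ matrix $B_i^j(x,0)$ by construction. Exponentiating, this is exactly the statement that $\Phi|_I = \Phi(0)|_I = \mathrm{id}_I$ and that the maps induced by $\Phi$ and $\Phi(0)$ on $N(I\subset M)$ agree, i.e.\ that $\Phi$ and $\Phi(0)$ share the same 1-jet along $I$. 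The only real obstacle in the plan is the cancellation producing the explicit formula for $X(\lambda)$; once that is in hand everything else is a bookkeeping exercise combined with the ODE smoothness arguments already developed in Chapter~1.
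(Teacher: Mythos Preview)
Your proposal is correct and follows essentially the same route as the paper: write $\Phi(\lambda)=e^{X(\lambda)}$ with $X(\lambda)=(S(\lambda)^{-1})_\star X$, compute $X(\lambda)$ explicitly using the Hadamard decomposition of $X$, and read off smoothness in $\lambda$, the commutation $[\rho,X(0)]=0$, and the 1-jet statement. The only cosmetic difference is that the paper phrases the last point as $(X-X(0))\mathcal{I}\subset\mathcal{I}^2$ (hence $(\Phi^\star-\Phi(0)^\star)\mathcal{I}\subset\mathcal{I}^2$), whereas you phrase it as equality of the induced linearizations on the normal bundle; these are the same statement.
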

\begin{proof}
Let $\Phi(\lambda)=S(\lambda)^{-1} \circ \Phi \circ S(\lambda) $. 
We assume $\Phi=e^{X}\in G$ where $X\in\mathfrak{g}$ is 
a vector field vanishing on $I$ 
thus $\Phi(\lambda)=S(\lambda)^{-1} \circ \Phi \circ S(\lambda) = S(\lambda)^{-1} \circ e^X \circ S(\lambda)=e^{S(\lambda)^{-1} \circ X \circ S(\lambda)}=e^{X(\lambda)} $ 
where $X(\lambda)=S(\lambda)^{-1} \circ X \circ S(\lambda)$.
$\lim_{\lambda\rightarrow 0} X(\lambda)=X(0)$ exists since $X=h^ia_{i}^j(x,h)\partial_{h^j}+h^ib_i^j(x,h)\partial_{x^j}$ hence $X(\lambda)=h^ia_{i}^j(x,\lambda h)\partial_{h^j} + \lambda h^ib_i^j(x,\lambda h)\partial_{x^j}$ and $X(0)=h^ia_{i}^j(x,0)\partial_{h^j}$.
We recall 
the following 
important fact, 
$X(0)$ is in 
fact scale invariant 
i.e. 
it commutes with $\rho$. 
thus $\Phi(0)=e^{X(0)}$ 
commutes with $\rho$. 
Moreover an easy computation:
$$\left(X-X(0) \right) h^iH_i(x,h)$$ 
$$=\left(h^i(a_{i}^j(x,h)-a_i^j(x,0))\partial_{h^j}+h^ib_i^j(x,h)\partial_{x^j}\right) h^iH_i(x,h)
$$
and the fact that
$a_i^j(x,h)-a_i^j(x,0)\in\mathcal{I} $
prove that $(X-X(0))h^iH_i(x,h)=O(\vert h\vert^2)$.
Thus $(X-X(0))\mathcal{I}\subset \mathcal{I}^2$
which implies $\left(e^X-e^{X(0)}\right)^\star \mathcal{I}=\left(\Phi-\Phi(0)\right)^\star \mathcal{I}\subset \mathcal{I}^2$.
This is enough to prove that 
$\Phi$ and $\Phi(0)$ have same 1-jet along $I$.
\end{proof}
\begin{lemm}
Under the hypothesis of the above lemma, the pull-back operator $\Phi(\lambda)^\star$
admits a Taylor expansion of the following form:
$$\Phi(\lambda)^\star=\sum_{k=0}^N \frac{\lambda^k}{k!}\mathbb{D}_k\Phi_0^*+I_N(\Phi,\lambda)^\star $$
where $\mathbb{D}_k$ 
is a differential operator 
which depends polynomially on finite jets of $X$ and $\rho$ at $I$.
\end{lemm}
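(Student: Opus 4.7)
The plan is to Taylor expand the smooth family of operators $\lambda \mapsto \Phi(\lambda)^\star$ around $\lambda = 0$, using the integral remainder form, and then to identify the coefficients of this expansion as differential operators whose coefficients are built from jets of $X$ and $\rho$ along $I$. First I would observe that by the previous lemma $\lambda \mapsto X(\lambda) = S(\lambda)^{-1}\circ X\circ S(\lambda)$ extends smoothly across $\lambda = 0$ as a map into the Fréchet space of smooth vector fields on a $\lambda$-uniform neighborhood $V$ of $I$, so that $\Phi(\lambda) = e^{X(\lambda)}$ is a smooth family of diffeomorphisms on $V$, and therefore $\Phi(\lambda)^\star : C^\infty(V)\to C^\infty(V)$ depends smoothly on $\lambda\in[0,1]$. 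To isolate the $\lambda$-dependence from the limiting diffeomorphism $\Phi_0 = \Phi(0)$, I would introduce $\psi(\lambda) := \Phi_0^{-1}\circ\Phi(\lambda)$, which satisfies $\psi(0) = \mathrm{Id}$, and write $\Phi(\lambda)^\star = \psi(\lambda)^\star\circ\Phi_0^\star$.

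Next I would apply the Taylor formula with integral remainder to the smooth map $\lambda\mapsto\psi(\lambda)^\star f$, obtaining for every $f\in C^\infty(V)$
$$\psi(\lambda)^\star f = \sum_{k=0}^N \frac{\lambda^k}{k!}\mathbb{D}_k f + \frac{1}{N!}\int_0^\lambda (\lambda-s)^N \frac{d^{N+1}}{ds^{N+1}}\bigl(\psi(s)^\star f\bigr)\,ds,$$
with $\mathbb{D}_k f = \bigl.(d/d\lambda)^k\bigr|_{\lambda=0}(f\circ\psi(\lambda))$. By the Faà di Bruno chain rule, $\mathbb{D}_k$ is a differential operator of order at most $k$ whose coefficients are polynomials in the first $k$ Taylor coefficients $Y_j := \bigl.(d/d\lambda)^j\bigr|_{\lambda=0}\psi(\lambda)$ and in their spatial derivatives. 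Composing with $\Phi_0^\star$ gives the formula of the lemma with $I_N(\Phi,\lambda)^\star$ defined as the integral remainder composed with $\Phi_0^\star$.

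Finally I would trace the dependence of the $\mathbb{D}_k$ on jets of $X$ and $\rho$ at $I$. In the normal-form coordinates $\rho = h^j\partial_{h^j}$, the general form lemma of Chapter 1 lets us write $X = h^i a_i^j(x,h)\partial_{h^j} + h^i b_i^j(x,h)\partial_{x^j}$, so that
$$X(\lambda) = h^i a_i^j(x,\lambda h)\partial_{h^j} + \lambda\, h^i b_i^j(x,\lambda h)\partial_{x^j},$$
whose Taylor coefficients at $\lambda = 0$ are polynomials in $h$ with coefficients given by $h$-derivatives of $a_i^j, b_i^j$ at $h = 0$, i.e., by finite jets of $X$ along $I$ (the jet of $\rho$ enters through the choice of normal coordinates). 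Since $\Phi_0 = e^{X(0)}$ has its transversal Taylor expansion in the $h$ variable determined polynomially by the jets of $X(0)$ along $I$, and the ODE $\frac{d}{d\lambda}\Phi(\lambda) = Y(\lambda)\circ\Phi(\lambda)$ governing the family $\Phi(\lambda)$ has its right-hand side computed from the Taylor expansion of $X(\lambda)$, an induction on $k$ shows that the $Y_j$, and hence the coefficients of $\mathbb{D}_k$, are polynomials in finite jets of $X$ and $\rho$ at $I$.

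The main obstacle is the combinatorial bookkeeping in this last step: one must carefully track how jets propagate through the chain $X \leadsto X(\lambda) \leadsto \Phi(\lambda) \leadsto \psi(\lambda) \leadsto \mathbb{D}_k$. Fortunately, for each fixed $k$ only finitely many derivatives enter, so the verification reduces to an algebraic Faà di Bruno computation on finite jet bundles along $I$, rather than any genuine analytic difficulty.
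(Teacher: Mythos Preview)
Your proposal is correct and follows essentially the same strategy as the paper: Taylor expand the smooth family $\lambda\mapsto\Phi(\lambda)^\star$ at $\lambda=0$ and argue that the coefficients are differential operators built polynomially from finite jets of $X$ and $\rho$ along $I$. Your introduction of $\psi(\lambda)=\Phi_0^{-1}\circ\Phi(\lambda)$ to factor out $\Phi_0^\star$ is a clean device that the paper does only implicitly.

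The one place where the paper proceeds differently is in identifying the Taylor coefficients of $X(\lambda)$. You work in normal coordinates $\rho=h^j\partial_{h^j}$ and read off $\partial_\lambda^i X(0)$ from the explicit formula $X(\lambda)=h^i a_i^j(x,\lambda h)\partial_{h^j}+\lambda h^i b_i^j(x,\lambda h)\partial_{x^j}$; the dependence on jets of $\rho$ then enters through the choice of these coordinates. The paper instead uses the intrinsic identity $\lambda\frac{d}{d\lambda}X(\lambda)=-[\rho,X(\lambda)]$, which yields
\[
\partial_\lambda^i X(0)=\lim_{\lambda\to 0}\frac{1}{\lambda^i i!}(-\mathrm{ad}_\rho)(-\mathrm{ad}_\rho-1)\cdots(-\mathrm{ad}_\rho-i+1)X(\lambda),
\]
making the polynomial dependence on jets of both $X$ and $\rho$ manifest without fixing a chart. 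This shortens the ``combinatorial bookkeeping'' you flag as the main obstacle, since iterated brackets with $\rho$ directly produce differential expressions in the jets. Your coordinate route is perfectly valid; the paper's $\mathrm{ad}_\rho$ formula is just a more compact packaging of the same information.
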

\begin{proof}
We start from the identity 
$\lambda\frac{d}{d\lambda} X(\lambda)
= \lambda\frac{d}{d\lambda} Ad_{S(\lambda)}X
=-\left[\rho,X(\lambda)\right]$. 
This implies 
$$\partial^i_\lambda X(\lambda)
=\frac{1}{\lambda^i} \lambda^i\partial^i_{\lambda}X(\lambda) 
=\frac{1}{\lambda^i i!}\lambda\frac{d}{d\lambda}\dots 
\left(\lambda\frac{d}{d\lambda}-i+1 \right)X(\lambda)$$
$$=\frac{1}{\lambda^i i!}(-ad_\rho) \dots (-ad_\rho -i+1)X(\lambda)
$$ 
$$\implies \partial^i_\lambda X(0)
=\lim_{\lambda\rightarrow 0}\frac{1}{\lambda^i i!}(-ad_\rho) \dots (-ad_\rho -i+1)X(\lambda).$$ 
Hence the derivatives $\partial^i_\lambda X(0)$ only depend polynomially 
on finite jets of $X$ and $\rho$ at $(x,0)$. 
Then we Taylor expand the map $\Phi(\lambda)$ at $\lambda=0$:
$$\Phi(\lambda)=\sum_{k\leqslant N} \frac{\lambda^k}{k!} \left(\partial_\lambda^k e^{X(\lambda)\star}\right)_{\lambda=0}+ I_N(\Phi,\lambda)^\star $$ 
by definition of the exponential map and successive differentiation, 
the terms 
$\left(\partial_\lambda^k e^{X(\lambda)\star}\right)_{\lambda=0}$ 
are all 
of the form 
$\mathbb{D}_k\Phi_0^* $ where each $\mathbb{D}_k$ 
is a differential operator in 
$\mathbb{C}\left\langle \partial_\lambda^i X(0)\right\rangle_i$, for instance: 
$$\mathbb{D}_1=\partial_\lambda X(0), 
\mathbb{D}_2=\partial_\lambda^2X(0)+(\partial_\lambda X)^2(0).$$
\end{proof}
A consequence of the above lemma 
is that
for all distribution $t$, for all $N,\lambda$, 
the pull-back
$I_N(\Phi,\lambda)^\star t$ exists and 
we can bound its wave front set: 
$$WF(I_N(\Phi,\lambda)^\star t)
\subset \Phi(0)^\star WF(t)\cup  \Phi(\lambda)^\star WF(t).$$
\begin{thm}
Let $t\in F^\rho_\Omega$ for a choice of $\rho$, $t$ has the asymptotic expansion 
$t=\sum_l t_l$, and $\Phi=e^X\in G$ for $X$ vanishing on $I$. 
Then we have
$\Phi^\star t\in F_\Omega^{\rho}$ and $\Phi^\star t=\sum_{n=0}^\infty \tilde{t}_n$
where $\tilde{t}_n$ depends only on $t_l,l\leqslant n$ and polynomially on finite jets of $\rho, X$ at $I$.
\end{thm}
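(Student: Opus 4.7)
The plan is to leverage the two preparatory lemmas to reduce everything to a matching of Taylor coefficients in the ``scaling variable'' $\lambda$. The starting observation is the intertwining identity $S(\lambda)\circ \Phi(\lambda) = \Phi\circ S(\lambda)$, equivalently $S(\lambda)^\star \Phi^\star = \Phi(\lambda)^\star S(\lambda)^\star$, which I intend to apply to both sides of the decomposition $t=\sum_{l=0}^N t_l + R_N$.

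Applying $S(\lambda)^\star$ to this decomposition and using that each $t_l$ satisfies $(\rho - (\Omega+l))t_l=0$, so that $S(\lambda)^\star t_l = \lambda^{\Omega+l}t_l$, I obtain
$$S(\lambda)^\star t \;=\; \sum_{l=0}^N \lambda^{\Omega+l}\, t_l + S(\lambda)^\star R_N.$$
Composing with the Taylor expansion $\Phi(\lambda)^\star = \sum_{k=0}^N \tfrac{\lambda^k}{k!}\mathbb{D}_k\Phi_0^\star + I_N(\Phi,\lambda)^\star$ supplied by the second lemma, and distributing products, I isolate, for each $n\leq N$, the coefficient of $\lambda^{\Omega+n}$:
$$\tilde t_n \;=\; \sum_{k+l=n}\frac{1}{k!}\,\mathbb{D}_k\,\Phi_0^\star\, t_l.$$
Since each $\mathbb{D}_k$ is a polynomial in the $Y_i=\partial_\lambda^i X(0)$ whose index sum is $k$, and the first lemma's proof shows $[\rho,Y_i]=iY_i$ (so $Y_i$ raises the $\rho$-scaling degree by $i$), while $\Phi_0$ commutes with $\rho$, one checks by induction that each summand $\mathbb{D}_k\Phi_0^\star t_l$ has scaling degree $\Omega+k+l=\Omega+n$, hence $(\rho-(\Omega+n))\tilde t_n=0$. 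This gives the asymptotic piece of the claimed expansion, and the assertion on polynomial dependence of $\tilde t_n$ on finite jets of $X$ and $\rho$ along $I$ is then immediate from the explicit form of the $Y_i$ and the first lemma.

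It remains to control the remainder $\Phi^\star t - \sum_{n\leq N}\tilde t_n$ and show that it lies in $E_{s+N+1}$. The natural strategy is to rewrite this remainder as
$$S(\lambda)^\star\Bigl(\Phi^\star t - \sum_{n\leq N}\tilde t_n\Bigr) \;=\; \Phi(\lambda)^\star\bigl(S(\lambda)^\star R_N\bigr) + I_N(\Phi,\lambda)^\star\Bigl(\sum_{l\leq N}\lambda^{\Omega+l}t_l\Bigr) + \text{(monomial remainders $\lambda^{\Omega+n}$ with $n>N$)},$$
and to bound each term. The first term is handled because $R_N\in E_{s+N+1}$, $\Phi(\lambda)^\star$ is a smooth $\lambda$-family of local diffeomorphisms fixing $I$, and Chapter~4 (Theorem~\ref{thmfi}, Theorem~\ref{thminvmuloc}) guarantees that the pullback by such a family preserves weak homogeneity of the given order. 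The Taylor-remainder $I_N(\Phi,\lambda)^\star$ produces, by construction, a factor $\lambda^{N+1}$, so its contribution on the homogeneous pieces $\lambda^{\Omega+l}t_l$ acquires the correct extra power of $\lambda$ to land in $E_{s+N+1}$.

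The main obstacle will be the last step: ensuring that $I_N(\Phi,\lambda)^\star R_N$ really is controlled in $E_{s+N+1}$ uniformly in $\lambda\in[0,1]$, because $I_N(\Phi,\lambda)^\star$ is not just a multiplication operator but a genuine pullback by a family of diffeomorphisms, and one must verify that the Banach--Steinhaus-type estimates from Chapter~4 transfer through the integral remainder formula. I expect this to be handled by interchanging the scaling integral with the integral remainder and invoking the boundedness of the pullback established in Chapter~4; the outcome is that $\Phi^\star t\in F_\Omega^\rho$ with the claimed form of $\tilde t_n$.
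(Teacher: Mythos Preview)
Your approach is essentially identical to the paper's: both use the intertwining identity $\Phi(\lambda)^\star S(\lambda)^\star = S(\lambda)^\star \Phi^\star$ together with the Taylor expansion of $\Phi(\lambda)^\star$ from the second lemma, and both arrive at the same formula $\tilde t_n=\sum_{k+l=n}\tfrac{1}{k!}\mathbb{D}_k\Phi_0^\star t_l$. The paper's proof is in fact terser than yours---it simply asserts that $\Phi(\lambda)^\star(\lambda^{-\Omega}t_\lambda)=\lambda^{-\Omega}(\Phi^\star t)_\lambda$ has an asymptotic expansion because a smooth family acts on something with an asymptotic expansion, and then reads off the coefficients; your explicit discussion of the remainder and your verification that $(\rho-(\Omega+n))\tilde t_n=0$ fill in steps the paper leaves implicit.

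One small correction: from $\lambda\partial_\lambda X(\lambda)=-[\rho,X(\lambda)]$ one gets $[\rho,Y_i]=-iY_i$, not $+iY_i$. The cleanest way to see that $\mathbb{D}_k$ nonetheless raises the $\rho$-scaling degree by $k$ (as you need) is to conjugate the whole Taylor expansion: from $S(\mu)^{-1}\Phi(\lambda)S(\mu)=\Phi(\lambda\mu)$ one obtains $S(\mu)^\star\mathbb{D}_k(S(\mu)^{-1})^\star=\mu^k\mathbb{D}_k$ directly, which yields $(\mathbb{D}_k\Phi_0^\star t_l)_\mu=\mu^{\Omega+l+k}\mathbb{D}_k\Phi_0^\star t_l$ without tracking signs through the individual $Y_i$.
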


\begin{proof}
Since $\Phi(\lambda)$ depends smoothly in $\lambda$ and  $\lambda^{-\Omega}t_\lambda$ admits an asymptotic expansion at $\lambda=0$, the pulled back family $\Phi(\lambda)^*(\lambda^{-\Omega}t_\lambda)=\lambda^{-\Omega}\left(\Phi^*t\right)_\lambda$ admits an asymptotic expansion at $\lambda=0$. In order to conclude, we expand $\lambda^{-\Omega}t_\lambda=\sum_{l=0}^\infty \lambda^{-\Omega+l}t_l $ and $\Phi(\lambda)=\sum_{k=0}^\infty \frac{\lambda^k}{k!} \mathbb{D}_k\Phi_0^*$
and we obtain the general expansion
$$\Phi(\lambda)^*\left(\lambda^{-\Omega}t_\lambda\right)=\sum_{n=0}^\infty\lambda^{-\Omega+n}
\sum_{k+l=n} \frac{1}{k!} \mathbb{D}_k\Phi_0^*t_l.$$ 
\end{proof}
We keep the notation and hypothesis of the above theorem
\begin{coro}
Let $\Gamma$ be a cone in $T^\bullet (M\setminus I)$.
If $\forall k$, $WF(t_k)\subset\Gamma$ then
$\forall n, WF(\tilde{t}_n)\subset \Phi_0^\star \Gamma$.
\end{coro}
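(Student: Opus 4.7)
The plan is to use the explicit formula $\tilde{t}_n = \sum_{k+l=n} \tfrac{1}{k!}\, \mathbb{D}_k \Phi_0^{\star} t_l$ derived in the preceding theorem, and then invoke two elementary microlocal facts already available from the pull-back theorem of H\"ormander (recalled in Chapter $4$): (i) differential operators with smooth coefficients do not enlarge the wave front set, and (ii) pull-back by a diffeomorphism transforms the wave front set via its cotangent lift.

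First I would note that $\tilde{t}_n$ is a \emph{finite} sum indexed by $(k,l)$ with $k+l = n$. Since $WF(u+v)\subset WF(u)\cup WF(v)$, it suffices to control $WF(\mathbb{D}_k \Phi_0^{\star} t_l)$ for each such pair. Next, the operators $\mathbb{D}_k$ constructed in the previous lemma are polynomial expressions in $\partial_\lambda^i X(0)$, $i\leqslant k$, hence differential operators with smooth coefficients; the standard microlocal estimate $WF(\mathbb{D}_k u)\subset WF(u)$ therefore reduces the problem to bounding $WF(\Phi_0^{\star} t_l)$.

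Since $\Phi_0\in G$ is a diffeomorphism fixing $I$, H\"ormander's pull-back theorem applies unconditionally and gives $WF(\Phi_0^{\star} t_l) = \Phi_0^{\star} WF(t_l)$, where $\Phi_0^{\star}$ on the right-hand side denotes the cotangent action $(x;\xi)\mapsto (\Phi_0^{-1}(x);\xi\circ d\Phi_0)$. By hypothesis $WF(t_l)\subset \Gamma$ for every $l$, so monotonicity of the image yields $WF(\Phi_0^{\star} t_l)\subset \Phi_0^{\star}\Gamma$. Combining the three steps, every summand of $\tilde{t}_n$ has wave front set in $\Phi_0^{\star}\Gamma$, and hence so does $\tilde{t}_n$ itself.

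There is essentially no obstacle here: the corollary is a direct microlocal reading of the algebraic decomposition established in the preceding theorem. The only minor point worth checking is that each $\mathbb{D}_k$ is genuinely a differential operator (not a Fourier integral operator of a more general sort), which is ensured by the fact that it is built polynomially from the smooth jets of $X$ and $\rho$ along $I$; once this is observed, the chain $WF(\tilde{t}_n)\subset WF(\Phi_0^{\star} t_l)= \Phi_0^{\star}WF(t_l)\subset \Phi_0^{\star}\Gamma$ goes through with no further analysis.
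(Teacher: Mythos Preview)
Your argument is correct and is exactly the intended one: the paper states the corollary without proof, treating it as an immediate consequence of the formula $\tilde{t}_n=\sum_{k+l=n}\tfrac{1}{k!}\,\mathbb{D}_k\Phi_0^{\star}t_l$ obtained in the preceding theorem, together with the standard facts that differential operators do not enlarge wave front sets and that pull-back by a diffeomorphism transforms them via the cotangent lift.
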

We deduce from the previous theorem an important corollary which is that the class of 
Fuchsian symbols $F_\Omega$ is \textbf{independent} of the choice of Euler vector field. 
\begin{coro}
Let $t\in F^\rho_\Omega$ for a choice of $\rho$, then for any other generalized Euler $\tilde{\rho}$, we have  
$t\in F_\Omega^{\tilde{\rho}}$.
\end{coro}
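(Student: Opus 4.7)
My plan is to reduce the corollary to the preceding theorem combined with the local conjugation between Eulers established in Chapter~1. By Corollary~\ref{conjugcoro}, around every point $p\in I$ there exists a local diffeomorphism $\Psi\in G$ fixing $I$ pointwise and conjugating the two scalings, i.e.\ $\tilde S(\lambda)=\Psi\circ S(\lambda)\circ\Psi^{-1}$ where $S(\lambda)=e^{\log\lambda\rho}$ and $\tilde S(\lambda)=e^{\log\lambda\tilde\rho}$. Taking pull-backs, this conjugation immediately yields the key identity
\[
\tilde S(\lambda)^\star\big((\Psi^{-1})^\star v\big)=\lambda^{\Omega+n}(\Psi^{-1})^\star v\quad\text{whenever}\quad S(\lambda)^\star v=\lambda^{\Omega+n}v,
\]
obtained by using $\Psi^\star(\Psi^{-1})^\star=\mathrm{Id}$. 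In words, pull-back by $\Psi^{-1}$ transports $\rho$-homogeneous components of degree $\Omega+n$ to $\tilde\rho$-homogeneous components of the same degree.

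The argument then runs in two steps. First, I apply the preceding theorem to the pair $(t,\Psi)$: since $\Psi$ fixes $I$ pointwise and is close to the identity it can be realised locally as $e^X$ with $X\in\mathfrak g$ vanishing on $I$, so $u:=\Psi^\star t$ still belongs to $F^\rho_\Omega$ and admits a Fuchsian expansion $u=\sum_{n\le N}u_n+R_N$ with each $u_n$ $\rho$-homogeneous of degree $\Omega+n$ and $R_N\in E_{s+N+1}$. Second, I apply $(\Psi^{-1})^\star$ to recover $t$; by the identity above, each $(\Psi^{-1})^\star u_n$ is $\tilde\rho$-homogeneous of degree $\Omega+n$, and the remainder $(\Psi^{-1})^\star R_N$ still lies in $E_{s+N+1}$ because the class $E_s$ is intrinsic by Theorem~\ref{locthmGOOD} of Chapter~1 and is therefore preserved by pull-back along the diffeomorphism $\Psi^{-1}\in G$. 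This exhibits
\[
t=\sum_{n\le N}(\Psi^{-1})^\star u_n+(\Psi^{-1})^\star R_N
\]
as a $\tilde\rho$-Fuchsian expansion on a neighborhood of $p$, which is exactly what $t\in F^{\tilde\rho}_\Omega$ means locally at $p$; gluing over a locally finite cover of $I$ by such neighborhoods yields the global statement.

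The main delicate point is to justify that the preceding theorem applies to $\Psi$, namely that $\Psi$ can be written as the time-one flow of some $X\in\mathfrak g$ vanishing on $I$. Because $\Psi$ arises as the $\lambda=0$ limit of the smooth family $\Phi_a(\lambda)=S_a^{-1}(\lambda)\circ S(\lambda)$ with $\Phi_a(1)=\mathrm{Id}$ and because it fixes $I$ pointwise, it sits in a neighborhood of the identity in $G$, so a standard exponential-chart argument on a sufficiently small neighborhood of each $p\in I$ produces such an $X$. Should this formal step be felt unsatisfactory, one can entirely bypass $\Psi$ and repeat the proof scheme of the preceding theorem directly with the smooth family $\Phi(\lambda)=S(\lambda)^{-1}\circ\tilde S(\lambda)$ supplied by Proposition~\ref{propositionvariablefamily}: expanding $\Phi(\lambda)^\star$ in its Taylor series at $\lambda=0$ via the preparatory lemma of that section and multiplying it by the asymptotic expansion of $\lambda^{-\Omega}S(\lambda)^\star t$ produces the desired $\tilde\rho$-Fuchsian expansion of $t$, with coefficients depending only on finitely many of the original $t_l$ and on finite jets of $\rho,\tilde\rho$ along $I$.
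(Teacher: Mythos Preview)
Your approach is correct and follows the same overall strategy as the paper: conjugate the two Euler vector fields by some $\Psi\in G$ and combine this with the $G$-stability of $F_\Omega^\rho$ established in the preceding theorem. The paper's own proof is in fact considerably more terse than yours: it only writes down the conjugation identity $(\rho-\Omega)t=0\implies(\tilde\rho-\Omega)\Phi^*t=0$ for an \emph{exactly} homogeneous $t$ and literally concludes ``$\Phi^*t\in F_\Omega^{\tilde\rho}$'' rather than ``$t\in F_\Omega^{\tilde\rho}$'', tacitly leaving to the reader the appeal to the preceding stability theorem and the treatment of the full asymptotic expansion with remainders. Your two-step argument (pull back by $\Psi$, apply the stability theorem, then push forward by $\Psi^{-1}$ using the conjugation of scalings) makes these implicit steps explicit.

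Two small remarks. First, your justification that $(\Psi^{-1})^*R_N\in E_{s+N+1}$ does not really rest on Theorem~\ref{locthmGOOD} (which is about independence of the \emph{choice of $\rho$}); the cleaner argument is exactly the one you already wrote for the homogeneous pieces: since $\tilde S(\lambda)^*(\Psi^{-1})^*=(\Psi^{-1})^*S(\lambda)^*$ and pull-back by the fixed diffeomorphism $(\Psi^{-1})^*$ is bounded on distributions, boundedness of $\lambda^{-(s+N+1)}S(\lambda)^*R_N$ transfers directly. Second, your concern about writing $\Psi=e^X$ is legitimate given the literal hypothesis of the preceding theorem, and your ``bypass'' alternative---working directly with the smooth family $\Phi(\lambda)=S(\lambda)^{-1}\circ\tilde S(\lambda)$ from Proposition~\ref{propositionvariablefamily} and Taylor-expanding $\Phi(\lambda)^*$ at $\lambda=0$---is the cleanest route and is exactly how the preceding theorem's proof scheme is meant to be applied here; it avoids the exponential-chart issue entirely.
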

\begin{proof}
By the result of chapter $1$, for any other vector $\tilde{\rho}$, we have $\Phi^{-1*}\tilde{\rho}=\rho$ for a diffeomorphism $\Phi$ fixing $I$.
$$ 0=\rho t-\Omega t= \Phi^{-1*}\tilde{\rho} \Phi^{*}t-\Phi^{-1*}\Omega\Phi^{*} t \implies \tilde{\rho} \Phi^{*}t-\Omega\Phi^{*} t=0   $$
this means $\Phi^*t $ is killed by the Fuchsian operator $\tilde{\rho}-\Omega$ thus $\Phi^*t \in F_\Omega^{\tilde{\rho}}$. 
\end{proof}

\section{Meromorphic regularization as a Mellin transform.}

 In this section, for pedagogical reasons, we work in local charts in order to make as explicit as possible the relationship with the Mellin transform. More precisely, we work in a given fixed compact subset $K=K_1\times K_2\subset \mathbb{R}^{n+d}$, the compact set is geodesically convex for $\rho=h^j\partial_{h^j}$. 
All test functions are supported in $K$.
Let $\chi\in C^\infty_0(\mathbb{R}^{n+d})$, $\chi\geqslant 0$ and $\chi|_{K\cap \{\vert h\vert\leqslant a\}}=1, \chi|_{K\cap \{\vert h\vert\geqslant b\}}=0$ where $b>a>0$.

\begin{equation}
\left\langle T,\omega\right\rangle =\int_{0}^1\frac{d\lambda}{\lambda} \left\langle T \psi_{\lambda^{-1}} ,\omega\right\rangle+ \left\langle T(1-\chi),\omega\right\rangle
\end{equation}
\paragraph{The meromorphic regularization formula.}

 We modify the extension formula of H\"ormander by introducing a weight $\lambda^\mu$ in the integral over the scale $\lambda$:
\begin{equation}\label{meromreg}
\left\langle T^\mu,\omega\right\rangle =\int_{0}^1\frac{d\lambda}{\lambda}\lambda^\mu \left\langle T \psi_{\lambda^{-1}} ,\omega\right\rangle,
\end{equation}
this defines a regularization of the extension depending on a parameter $\mu$. 
We would like to call the attention of the reader on the fact that if the test form $\omega$ was not supported on $I$, we would have a well defined extension at the limit $\mu\rightarrow 0$.

\paragraph{The philosophy of meromorphic regularization.}

 The goal is to prove that $T^\mu$ can be extended to a family of current in $\mathcal{D}^\prime_k(U)$ depending holomorphically in $\mu$ for $Re(\mu)$ large enough. 
Then under the hypothesis that $T$ is a Fuchsian symbol, $T^\mu$ should extend \textbf{meromorphically} in $\mu$ with poles at $\mu=0$ which are currents supported on $I$ (ie local counterterms).
Then the meromorphic regularization will be given by the formula
\begin{equation}
\lim_{\mu\rightarrow 0}\left(T^\mu + T (1-\chi)-\text{poles at $\mu=0$ with value current supported on $I$}\right)
\end{equation}

\begin{defi}
A family $(T^\mu)_\mu$ of currents in $\mathcal{D}_k^\prime(U)$ is said to be holomorphic (resp meromorphic) in $\mu$ iff for all test forms $\omega\in \mathcal{D}^k(U)$, $\mu\mapsto \left\langle T^\mu  ,\omega\right\rangle\in\mathbb{C}$ is holomorphic (resp meromorphic).
\end{defi}
If $\mu\mapsto T^\mu$ is holomorphic in a domain $B_r(\mu_0)\setminus \{\mu_0\}$, for all test functions $\varphi$, the map
$\mu\mapsto \left\langle T^\mu  ,\varphi\right\rangle $ 
has an expansion in Laurent series in $\mu$ around $\mu_0$, $\left\langle T^\mu  ,\varphi\right\rangle=\sum_{k=-\infty}^{k=+\infty} (\mu-\mu_0)^k\left\langle T^{\mu_0(k)}   ,\varphi\right\rangle$
where each coefficient of the Laurent series is a distribution tested against $\varphi$ (there is a similar discussion in \cite{Gelfand} Chapter $1$ appendix $2$).

\begin{proof}
By the Cauchy formula and by the holomorphicity of $\left\langle T^\mu  ,\varphi\right\rangle$,
for all test function $\varphi$, 
we must have
$$\forall k\in\mathbb{Z},
\left\langle T^{\mu_0(k)}   ,\varphi\right\rangle=\frac{1}{2i\pi}\int_{\partial B_r(\mu_0)}\frac{d\mu}{(\mu-\mu_0)^{k+1}} \left\langle T^\mu  ,\varphi\right\rangle .$$
Thus we define $T^{\mu_0(k)}=\frac{1}{2i\pi}\int_{\partial B_r(\mu_0)}\frac{d\mu}{(\mu-\mu_0)^{k+1}} T^\mu $ which is a linear map on $\mathcal{D}(U)$.
To prove the continuity, we just use the Banach Steinhaus theorem, for all compact $K\subset U$, there exists $C>0$
and a seminorm $\pi_m$ s.t. for all $\varphi\in\mathcal{D}_K(U)$
$$\forall\mu\in\partial B_r(\mu_0), \vert\left\langle T^\mu  ,\varphi\right\rangle\vert\leqslant C\pi_m(\varphi) ,$$
thus 
$$\forall \varphi\in\mathcal{D}_K(U), \vert\left\langle T^{\mu_0(k)}   ,\varphi\right\rangle \vert\leqslant
Cr^{-k}\pi_m(\varphi),$$
which proves the continuity of $T^{\mu_0(k)}$ for all $k$.
\end{proof}
 Thus we can write the Laurent series expansion of $\mu\mapsto T^\mu$ around $\mu_0$ as a series in powers of $(\mu-\mu_0)$ with distributional coefficients:
$$ T^\mu =\sum_{k=-\infty}^{k=+\infty} (\mu-\mu_0)^k T^{\mu_0(k)} .$$
\begin{defi}
We say that $\mu\mapsto T^\mu$ is meromorphic with poles of order $N$ at $\mu_0$ when  $\mu\mapsto T^\mu$ is holomorphic in a domain $B_r(\mu_0)\setminus \{\mu_0\}$ and 
$T^\mu =\sum_{k=-N}^{k=+\infty} (\mu-\mu_0)^k T^{\mu_0(k)} $.
\end{defi}
Using this definition, it makes sense to speak about the support of the poles, it just means the support of the distributions $T^{\mu_0(k)}$ for $k<0$.

\subsubsection{The holomorphicity theorem.}
 Recall that $T^\mu$ is defined by the formula
$\left\langle T^\mu,\omega\right\rangle =\int_{0}^1\frac{d\lambda}{\lambda}\lambda^\mu \left\langle T \psi_{\lambda^{-1}} ,\omega\right\rangle$.

\begin{lemm}\label{convlemm}
If $T\in E_{s}\left(\mathcal{D}_k^\prime\left(U\setminus I\right)\right)$,
then $T^\mu$ has a well defined extension in $\mathcal{D}_k^\prime(U)$ for $Re(\mu)+s+k-n>0$ and $T^\mu\in E_{s+Re(\mu)}(\mathcal{D}_k^\prime(U))$.
\end{lemm}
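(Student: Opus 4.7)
The plan is to mirror the proofs of Corollary \ref{ren1} and Proposition \ref{scal1}, now in the current-theoretic setting with an extra holomorphic weight $\lambda^\mu$ in the scale variable. The whole proof hinges on one scaling identity that reduces the pairing $\langle T\psi_{\lambda^{-1}},\omega\rangle$ to a pairing of the rescaled current $T_\lambda$ with a bounded family of test forms supported on a fixed annulus away from $I$.

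First I would use the semigroup property $(\psi_{\lambda^{-1}})_\lambda=\psi$ together with the defining identity $T(\alpha)=T_\lambda(\alpha_\lambda)$ to write
\begin{equation*}
\langle T\psi_{\lambda^{-1}},\omega\rangle=\langle T,\psi_{\lambda^{-1}}\omega\rangle=\langle T_\lambda,\psi\,\omega_\lambda\rangle.
\end{equation*}
Decomposing a test form as $\omega=\sum_{|I|+|J|=k}\omega_{IJ}\,dx^I\wedge dh^J$ in a coordinate chart where $\rho=h^j\partial_{h^j}$ and pulling back gives
\begin{equation*}
\psi\,\omega_\lambda=\sum_{|I|+|J|=k}\lambda^{|J|}\,\psi(x,h)\,\omega_{IJ}(x,\lambda h)\,dx^I\wedge dh^J,
\end{equation*}
so that $\lambda^{-(k-n)}\psi\,\omega_\lambda$ is a bounded family of compactly supported smooth $k$-forms in $\mathcal{D}^k(U\setminus I)$, with support confined to a fixed compact set determined by $\psi$ and $\omega$; the lower bound $|J|\geq k-n$ on the transverse degree of each surviving component is precisely what produces the term $k-n$ in the hypothesis.

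Applying Banach--Steinhaus to the bounded family $\lambda^{-s}T_\lambda$ in $\mathcal{D}'_k(U\setminus I)$ then yields a continuous seminorm $\pi_m$ on test forms and a constant $C$ such that
\begin{equation*}
|\langle T\psi_{\lambda^{-1}},\omega\rangle|\leq C\,\lambda^{s+k-n}\,\pi_m(\omega),\qquad\lambda\in(0,1].
\end{equation*}
The weighted integrand is therefore dominated by $\lambda^{\mathrm{Re}(\mu)+s+k-n-1}$, which is integrable on $(0,1]$ exactly when $\mathrm{Re}(\mu)+s+k-n>0$. This produces a well-defined linear functional $T^\mu\in\mathcal{D}'_k(U)$, continuous by the same estimate. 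Holomorphicity in $\mu$ on the half-plane $\{\mathrm{Re}(\mu)>n-k-s\}$ follows from Morera's theorem together with dominated convergence applied to the holomorphic integrand $\lambda^\mu$.

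The last step, showing $T^\mu\in E_{s+\mathrm{Re}(\mu)}(\mathcal{D}'_k(U))$, is where the main technical obstacle lies. After the substitution $\lambda=\nu\tau$ in $T^\mu_\nu(\omega)$ and using $(\omega_{\nu^{-1}})_\lambda=\omega_{\lambda/\nu}$, one has to control
\begin{equation*}
T^\mu_\nu(\omega)=\nu^\mu\int_0^{1/\nu}\tau^\mu\langle T_{\nu\tau},\psi\,\omega_\tau\rangle\,\frac{d\tau}{\tau}.
\end{equation*}
The key observation, exactly as in the proof of Proposition \ref{scal1}, is that $\psi\,\omega_\tau$ vanishes once $\tau$ exceeds the fixed constant $R/a$ determined by the support radius of $\omega$ and the inner radius $a$ of $\mathrm{supp}\,\psi$; this truncates the $\tau$-integral to $(0,R/a]$ for $\nu$ small enough, and the Banach--Steinhaus bound $|\langle T_{\nu\tau},\psi\,\omega_\tau\rangle|\leq C(\nu\tau)^s\tau^{k-n}\pi_m(\omega)$ then factorizes cleanly into $|T^\mu_\nu(\omega)|\leq C'\nu^{\mathrm{Re}(\mu)+s}\pi_m(\omega)$, which is exactly the desired weak homogeneity. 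The only subtle point is the bookkeeping of the two nested scales $\nu$ and $\tau$ and the need to keep the support-truncation compatible with the Banach--Steinhaus estimate uniformly in $\nu\in(0,1]$.
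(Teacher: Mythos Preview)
Your proof is correct and follows essentially the same route as the paper. The paper's proof is telegraphic: it cites the bounded-family machinery of Chapter~1 (the analogue of Corollary~\ref{ren1} for the existence of the integral, and Proposition~\ref{scal1} for the $E_{s+\mathrm{Re}(\mu)}$ property) and interprets $\langle T^\mu,\omega\rangle$ as a Mellin transform of the bounded function $f(\lambda)=\lambda^{-s-(k-n)}\langle T\psi_{\lambda^{-1}},\omega\rangle$, whereas you unpack those citations explicitly via the scaling identity $\langle T\psi_{\lambda^{-1}},\omega\rangle=\langle T_\lambda,\psi\,\omega_\lambda\rangle$, the form decomposition with $|J|\geq k-n$, Banach--Steinhaus, and the change of variable $\lambda=\nu\tau$ with support truncation. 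One minor remark: your Morera-theorem paragraph on holomorphicity belongs to the next result (Theorem~\ref{holomorphicitythm}) rather than this lemma, and in the final step the bound $|\langle T_{\nu\tau},\psi\omega_\tau\rangle|\leq C(\nu\tau)^s\tau^{k-n}\pi_m(\omega)$ requires $\nu\tau\leq 1$, which holds once $\nu\leq a/R$; the remaining compact range $\nu\in[a/R,1]$ is handled trivially.
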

\begin{proof}
We keep the notation of the proof of theorem $(1.2)$ and we recall the main facts.
In the proof of theorem $(1.2)$, we proved that if $(c_\lambda)_\lambda$ is a bounded family of distributions supported on a fixed annulus $a\leqslant\vert h\vert\leqslant b$, then $\lambda^{-d}c_\lambda(.,\lambda.)$ is a bounded family of distributions. Hence from the boundedness of the family $(c_\lambda=\lambda^{-s}t_\lambda\psi)_\lambda$, we deduced the boundedness of the family $(\lambda^{-d}c_\lambda(.,\lambda.)=\lambda^{-s-d} t \psi_{\lambda^{-1}})_\lambda$.
By reasoning as in the proof of theorem $(1.2)$ in Chapter $1$, the function $\lambda\mapsto f(\lambda)=\lambda^{-s-(k-n)}\left\langle T \psi_{\lambda^{-1}} ,\omega\right\rangle$ is a bounded function supported on the interval $[0,1]$.
Thus we find 
$$
\left\langle T^\mu,\omega\right\rangle =
\int_{0}^1\frac{d\lambda}{\lambda}\lambda^\mu \left\langle T \psi_{\lambda^{-1}} ,\omega\right\rangle $$ 
$$=\int_{0}^1\frac{d\lambda}{\lambda}\lambda^{\mu+s+k-n} \lambda^{-s-(k-n)}\left\langle T \psi_{\lambda^{-1}} ,\omega\right\rangle=\int_{0}^{+\infty}\frac{d\lambda}{\lambda}\lambda^{\mu+s+k-n}f(\lambda)
$$
The last integral \textbf{converges} when $Re(\mu)+s+k-n>0$ because $f$ is bounded on $[0,1]$. This already tells us that the family of currents $(T^\mu)_\mu$ is \textbf{well defined} in $\mathcal{D}_k^\prime(U)$ when $Re(\mu)+s+k-n>0$.
To prove that $T^\mu\in E_{s+Re(\mu)}$, we use the theorem (2.1) proved in Chapter $1$ for the bounded family of currents $\left(c_\lambda=\lambda^{-s} T_\lambda\psi\right)_\lambda $ supported on a fixed annulus.
\end{proof}
We establish a neat result namely that the function $\lambda\mapsto \left\langle T\psi_{\lambda^{-1}},\omega \right\rangle$ is in fact always smooth in $\lambda\in(0,1]$.
But of course that does not mean it should be $L_{loc}^1$ at $\lambda=0$. 
\begin{lemm}
$\lambda\mapsto \lambda^\mu \left\langle T\psi_{\lambda^{-1}},\omega\right\rangle$ is smooth in $0<\lambda\leqslant 1$.
\end{lemm}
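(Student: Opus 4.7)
The plan is to exploit the fact that $\psi$ is supported on an annulus $a \leq |h| \leq b$ strictly disjoint from $I = \{h=0\}$, so that $\psi_{\lambda^{-1}}(x,h) = \psi(x, \lambda^{-1}h)$ is supported on $\lambda a \leq |h| \leq \lambda b$, still disjoint from $I$ for every $\lambda \in (0,1]$. For a test form $\omega \in \mathcal{D}^k(U)$ with support in a compact $K$, and for $\lambda$ in a small neighborhood $[\lambda_0 - \delta, \lambda_0 + \delta] \subset (0,1]$, the supports of the family $\psi_{\lambda^{-1}}\wedge\omega$ all lie in the fixed compact set $K \cap \{|h| \geq (\lambda_0 - \delta)a\}$, which is disjoint from $I$. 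On this set $T$ restricts to a distribution of finite order, so the pairing $\langle T, \psi_{\lambda^{-1}}\wedge \omega\rangle$ is controlled by some seminorm $\pi_m$ evaluated on $\psi_{\lambda^{-1}}\wedge \omega$.

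Next I would observe that $\lambda \mapsto \psi_{\lambda^{-1}}$ is itself smooth as a map into $C^\infty(U)$ for $\lambda$ in this neighborhood; in fact the differential identity
\begin{equation}
\lambda \frac{d}{d\lambda}\psi_{\lambda^{-1}} = -(\rho\psi)_{\lambda^{-1}}
\end{equation}
shows that each $\lambda$-derivative is again of the form $(X\psi)_{\lambda^{-1}}$ for some combination $X$ of powers of $\rho$, still compactly supported in the same annular region and thus still supported in a fixed compact of $U\setminus I$ as $\lambda$ varies. Consequently, for every integer $n$, the family
\begin{equation}
\lambda \longmapsto \partial_\lambda^n\bigl(\psi_{\lambda^{-1}}\wedge \omega\bigr)
\end{equation}
is a smooth map into $\mathcal{D}^k(K \cap \{|h|\geq (\lambda_0-\delta)a\})$, a space on which $T$ acts continuously.

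The conclusion is then that one may differentiate under the duality pairing: for any $n$,
\begin{equation}
\partial_\lambda^n \langle T, \psi_{\lambda^{-1}}\wedge\omega\rangle = \bigl\langle T,\, \partial_\lambda^n(\psi_{\lambda^{-1}}\wedge\omega)\bigr\rangle,
\end{equation}
which exists and is continuous in $\lambda \in (0,1]$. Multiplying by the factor $\lambda^\mu$, which is itself $C^\infty$ in $\lambda \in (0,1]$, and applying the Leibniz rule yields smoothness of $\lambda \mapsto \lambda^\mu \langle T\psi_{\lambda^{-1}},\omega\rangle$ on $(0,1]$. There is no real obstacle here: the point is exactly that the singularity of $T$ is confined to $I$, while $\psi_{\lambda^{-1}}$ cuts off a neighborhood of $I$ as long as $\lambda > 0$, so the potential difficulty — a blow up of the seminorm bounds as $\lambda \to 0$ — is irrelevant on any compact sub-interval of $(0,1]$. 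The behavior at $\lambda=0$, where the $L^1_{\mathrm{loc}}$ property may fail, is precisely what the subsequent meromorphic regularization is designed to address.
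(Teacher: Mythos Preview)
Your proof is correct and follows essentially the same idea as the paper's: the paper simply notes that $\psi_{\lambda^{-1}}\omega$ remains in a fixed compact set and is smooth in $\lambda$, then invokes H\"ormander's Theorem 2.1.3 on differentiation through the duality bracket. You have effectively unpacked that theorem in this context, and you are more explicit about why the relevant compact set stays away from $I$ (the annular support of $\psi_{\lambda^{-1}}$), which the paper leaves implicit.
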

\begin{proof} 
There is a compact set $K=\text{supp }\omega$ such that if $x\notin K$, $\psi_{\lambda^{-1}}\omega(x)=0$, $\forall \lambda\in(0,1]$. Also $\lambda\mapsto \psi_{\lambda^{-1}}\omega$ is smooth in $\lambda$. Then the result follows from application of Theorem 2.1.3 in \cite{Hormander}.
\end{proof}

\begin{thm}\label{holomorphicitythm}  
We keep the notation and hypothesis of lemma (\ref{convlemm}), then 
$\forall\omega\in \mathcal{D}^k(U)$ (resp $\omega\in \mathcal{D}^k(U\setminus I)$), the map
$\mu\mapsto \left\langle T^\mu,\omega \right\rangle$
is holomorphic in the $\textbf{half-plane}$ $Re(\mu)+s+k-n>0$ (resp holomorphic in $\mathbb{C}$).\end{thm}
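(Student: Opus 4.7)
The plan is to upgrade the convergence of the integral, already established in the previous lemma, to holomorphicity by applying Morera's theorem combined with Fubini. The key inputs are (i) the pointwise estimate on $\lambda \mapsto \langle T\psi_{\lambda^{-1}}, \omega\rangle$ extracted from the proof of the previous lemma, and (ii) the fact that $\mu \mapsto \lambda^\mu$ is entire in $\mu$ for every fixed $\lambda > 0$.

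First I would dispose of the easy case $\omega \in \mathcal{D}^k(U \setminus I)$. By assumption there exists $\delta > 0$ with $\supp\omega \subset \{|h| \geqslant \delta\}$. Since $\psi$ is supported in an annulus $\{a \leqslant |h| \leqslant b\}$, the scaled function $\psi_{\lambda^{-1}}$ is supported in $\{a\lambda \leqslant |h| \leqslant b\lambda\}$, so $\psi_{\lambda^{-1}}\omega = 0$ whenever $\lambda < \delta/b$. Hence the integral defining $\langle T^\mu, \omega\rangle$ is actually over the compact interval $[\delta/b, 1]$, bounded away from $0$. On such an interval $\mu \mapsto \lambda^\mu$ is entire in $\mu$ with bounds that are locally uniform in $\mu$, so the integral is entire as claimed.

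For the general case $\omega \in \mathcal{D}^k(U)$, fix any $\mu_0$ with $\Re(\mu_0) + s + k - n > 0$ and choose a small disk $D \subset \mathbb{C}$ around $\mu_0$ and some $\varepsilon > 0$ such that $\Re(\mu) + s + k - n \geqslant \varepsilon$ for all $\mu \in D$. From the proof of the previous lemma we have a uniform bound
\begin{equation*}
|\langle T\psi_{\lambda^{-1}}, \omega\rangle| \leqslant C \lambda^{s+k-n}, \qquad \lambda \in (0,1],
\end{equation*}
so $|\lambda^{\mu-1} \langle T\psi_{\lambda^{-1}}, \omega\rangle| \leqslant C \lambda^{\varepsilon - 1}$ for $\mu \in D$. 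This bound is integrable on $(0,1]$ and independent of $\mu \in D$, which by dominated convergence yields continuity of $\mu \mapsto \langle T^\mu, \omega\rangle$ on $D$. For any closed triangle $\gamma \subset D$, Fubini applies to give
\begin{equation*}
\oint_\gamma \langle T^\mu, \omega\rangle\, d\mu = \int_0^1 \frac{d\lambda}{\lambda} \langle T\psi_{\lambda^{-1}}, \omega\rangle \left(\oint_\gamma \lambda^\mu\, d\mu\right) = 0,
\end{equation*}
the inner contour integral vanishing because $\mu \mapsto \lambda^\mu$ is entire. Morera's theorem then yields holomorphicity on $D$, and since $\mu_0$ was arbitrary in the half-plane, holomorphicity on the full half-plane follows.

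There is no substantial obstacle here; the only point requiring care is checking that the logarithmic-type singularity one would pick up by differentiating $\lambda^\mu$ does not spoil integrability, but this is exactly why one argues via Morera rather than direct differentiation under the integral: Morera only needs the dominating function $C\lambda^{\varepsilon - 1}$, not $C\lambda^{\varepsilon - 1}|\log\lambda|$. (Had one preferred, the bound $C\lambda^{\varepsilon - 1}|\log\lambda|$ is still integrable on $(0,1]$, so direct differentiation under the integral also works and moreover identifies the derivative as $\int_0^1 \frac{d\lambda}{\lambda}\lambda^\mu \log\lambda\,\langle T\psi_{\lambda^{-1}},\omega\rangle$.)
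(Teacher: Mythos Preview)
Your proof is correct and complete. The approach differs from the paper's: the paper recasts $\langle T^\mu,\omega\rangle$ as a Mellin transform of the bounded function $f(\lambda)=\lambda^{-s-(k-n)}\langle T\psi_{\lambda^{-1}},\omega\rangle$ supported on $[0,1]$, then performs the change of variable $\lambda=e^t$ to reduce to a Fourier--Laplace transform of an $L^2$ function, invoking the classical holomorphy of such transforms. Your route via Morera and Fubini is more direct and self-contained --- it needs nothing beyond the dominating bound already extracted from the previous lemma and the entirety of $\mu\mapsto\lambda^\mu$. The paper's detour through Mellin analysis is not gratuitous, however: it situates $T^\mu$ in the Mellin framework that is used substantively later in the chapter for the meromorphic extension, so it serves an expository purpose even if it is heavier machinery for this particular statement. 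Your treatment of the case $\omega\in\mathcal{D}^k(U\setminus I)$ by observing that the integral truncates to a compact interval bounded away from zero is cleaner than what the paper does here (the paper defers this observation to the proof of the subsequent lemma on locality of poles).
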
  

\begin{proof}
We relate the regularization formulas to the Mellin transform.
By definition, the Mellin transform of a distribution $f\in \mathcal{D}^\prime(\mathbb{R}^+)$
is given by the formula (see ``The Mellin Transformation and Other
Useful Analytic Techniques'' by Don Zagier in \cite{Zeidler} p.~305 and \cite{Jeanquartier})
\begin{equation}
\tilde{f}(\mu)=\int_0^\infty \frac{d\lambda}{\lambda} \lambda^\mu f(\lambda).
\end{equation}  
Actually, in the notation of Zagier, we study the half-Mellin transform:
\begin{equation}
\tilde{f}_{\leqslant 1}(\mu)=\int_0^1\frac{d\lambda}{\lambda} \lambda^\mu f(\lambda)
\end{equation}  
The regularization formula (\ref{meromreg}) is the \textbf{Mellin transform} of the  function $\lambda\mapsto \left\langle T \psi_{\lambda^{-1}} ,\omega\right\rangle\chi_{[0,1]}$, where $\chi$ is the characteristic function of the interval $[0,1]$.
The function $\lambda\mapsto f(\lambda)=\lambda^{-s-(k-n)}\left\langle T \psi_{\lambda^{-1}} ,\omega\right\rangle\chi_{[0,1]}$ is a function in $C^\infty(0,1]\cap L^\infty[0,1]$ ( however, it is not smooth at $0$), $\left\langle T^\mu,\omega\right\rangle$ is thus reinterpreted as the Mellin transform $\Gamma_f(\mu+s+k-n)$ of $f\in C^\infty(0,1]\cap L^\infty[0,1]\implies f\in L^1[0,1]$.
Then we use the classical holomorphic properties of the Mellin transform as explained in \cite{Taylor} appendix A p.~308,309.
To understand the holomorphicity properties of the Mellin transform, we relate the Mellin transform with the Fourier Laplace transform in the complex plane by the variable change $e^{t}=\lambda $ (see \cite{Taylor} appendix A formula $\text{A}.18$ p.~308)
$$\int_0^1 \frac{d\lambda}{\lambda} \lambda^s f(\lambda)=\int_{-\infty}^0 dt e^{ts}f(e^t)=\int_{-\infty}^\infty dt e^{-ts}f(e^{-t})H(t)  $$ 
where $H$ is the Heaviside function and where $t\mapsto f(e^{-t})H(t)$ is bounded. For any $\varepsilon>0$, $t\mapsto e^{-t\varepsilon}H(t)f(e^{-t})$ is in $L^p(\mathbb{R}),\forall p\in[1,\infty]$, especially in $L^2(\mathbb{R})$ hence
$$s\mapsto\int_{-\infty}^\infty dt e^{-t(s+\varepsilon)}f(e^{-t})H(t)  $$ 
is holomorphic in $s$ for $Re(s)\geqslant 0$ by the properties of the holomorphic Fourier transform. As this is true for any $\varepsilon >0$, the Mellin transform is holomorphic on $Re(s)>0$.
%
%
%  Or we use the Paley Wiener theorem (Theorem $3.1$ in Jeanquartier transformation de Mellin et developpements asymptotiques) for the Mellin transform :
%the Mellin transform $\tilde{f}(\mu)=\int_0^\infty \frac{d\lambda}{\lambda}\lambda^\mu f(\lambda)$ of any bounded function $f$ supported on an interval $[0,1]$ is entire hence holomorphic in some domain of non empty interior. 
%But this theorem does not give a bound on the domain of holomorphicity. However, if $f$ is bounded on $[0,1]$, then $\forall\varepsilon>0,\lambda\mapsto\lambda^{\varepsilon}f(\lambda) $ is square integrable
%$$ \int_0^\infty \frac{d\lambda}{\lambda }(\lambda^{\varepsilon}f(\lambda))^2 <\infty $$
%and by application of theorem $3.2$ in Jeanquartier, the Mellin transform $\tilde{\lambda^\varepsilon f}(\mu)=\tilde{f}(\mu+\varepsilon)$ is $\textbf{holomorphic}$ in the half-plane $Re(\mu)>0$. Hence, $\forall\varepsilon>0$ the map $\mu\mapsto\tilde{f}(\mu) $ is $\textbf{holomorphic}$ in the half-plane $Re(\mu)>\varepsilon$, hence it is holomorphic in $Re(\mu)>0$.
%We deduce $\mu\mapsto \left\langle T^\mu,\omega\right\rangle=\tilde{f}(\mu+s+k-n)$ is holomorphic in $Re(\mu)+s+k-n>0$.
\end{proof}
Let us keep the notations of the previous theorem and consider the family $\mu\mapsto T^\mu$  holomorphic for $Re(\mu)+s+k-n>0$.
We prove a lemma which states that if there is a meromorphic extension of the holomorphic family $\mu\mapsto T^\mu$, then this meromorphic extension must have poles supported on $I$ (ie locality of counterterms). 
\begin{lemm}\label{locpoles}
If $\mu\mapsto T^\mu$ is a meromorphic extension of the holomorphic family $\mu\mapsto T^\mu$, then the poles of $T^\mu$ are distributions in $\mathcal{D}^\prime(U)$ \textbf{supported} on $U\cap I$
i.e. they are local counterterms.
\end{lemm}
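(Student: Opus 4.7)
The plan is to exploit the dichotomy already established in Theorem \ref{holomorphicitythm}: the family $\mu \mapsto \langle T^\mu, \omega\rangle$ is only meromorphic in the half-plane $\mathrm{Re}(\mu)+s+k-n>0$ in general, but it is \emph{entire} in $\mu \in \mathbb{C}$ as soon as $\omega \in \mathcal{D}^k(U\setminus I)$. This asymmetry is precisely what forces the poles to be supported on $I$.

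Concretely, let $\mu_0$ be a pole of the meromorphic extension, and write the Laurent expansion
\[
T^\mu \;=\; \sum_{k \geq -N} (\mu-\mu_0)^k \, T^{\mu_0(k)}
\]
in some punctured disk $B_r(\mu_0)\setminus\{\mu_0\}$, where each coefficient $T^{\mu_0(k)} \in \mathcal{D}'_k(U)$ is recovered by the Cauchy formula
\[
T^{\mu_0(k)} \;=\; \frac{1}{2i\pi}\int_{\partial B_r(\mu_0)} \frac{d\mu}{(\mu-\mu_0)^{k+1}}\, T^\mu,
\]
as in the discussion preceding the lemma. The key step is now to test against an arbitrary $\omega \in \mathcal{D}^k(U\setminus I)$: by Theorem \ref{holomorphicitythm}, the scalar function $\mu \mapsto \langle T^\mu, \omega\rangle$ extends holomorphically to the whole complex plane. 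In particular it is holomorphic at $\mu_0$, so its Laurent expansion at $\mu_0$ has vanishing negative part, which means
\[
\forall k<0,\quad \forall \omega \in \mathcal{D}^k(U\setminus I),\qquad \langle T^{\mu_0(k)}, \omega\rangle \;=\; 0.
\]
By definition this means $T^{\mu_0(k)}$ vanishes on $U\setminus I$, hence its support is contained in $U\cap I$, which is the desired localization of the polar part.

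The proof is mostly a bookkeeping argument once one notices the two different domains of holomorphy from Theorem \ref{holomorphicitythm}; the only substantive point to justify is that the Cauchy integral commutes with the duality pairing $\langle \cdot, \omega\rangle$, which is immediate because the integrand $\mu \mapsto T^\mu$ is continuous in $\mu$ with values in $\mathcal{D}'_k(U)$ on the compact contour $\partial B_r(\mu_0)$ (a consequence of the explicit Mellin representation together with the Banach--Steinhaus argument already used to define $T^{\mu_0(k)}$ as a distribution). There is no serious obstacle here; the result is really a structural consequence of the fact that the only obstruction to convergence of the Mellin integral $\int_0^1 \frac{d\lambda}{\lambda}\lambda^\mu \langle T\psi_{\lambda^{-1}},\omega\rangle$ comes from the behaviour near $\lambda=0$, which in turn only sees the germ of $\omega$ along $I$.
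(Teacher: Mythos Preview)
Your proof is correct and follows essentially the same approach as the paper: both arguments hinge on the fact (stated in Theorem \ref{holomorphicitythm} and re-derived in the paper's proof of the lemma) that $\mu\mapsto\langle T^\mu,\omega\rangle$ is entire whenever $\omega\in\mathcal{D}^k(U\setminus I)$, so the negative Laurent coefficients at any pole $\mu_0$ must vanish against such $\omega$. Your version is slightly cleaner in citing Theorem \ref{holomorphicitythm} directly rather than re-explaining why the Mellin integral converges for all $\mu$ when $\omega$ avoids $I$.
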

\begin{proof}
$\forall\omega\in \mathcal{D}^k(U), \mu\mapsto \left\langle T^\mu,\omega \right\rangle$ is holomorphic in the $\textbf{half-plane}$ $Re(\mu)+s+k-n>0$. 
Let us notice that if $\omega\in \mathcal{D}^k(U\setminus I)$, the function $\lambda\mapsto \lambda^\mu \left\langle T \psi_{\lambda^{-1}} ,\omega\right\rangle$ is smooth in $\lambda$ and vanishes in a neighborhood of $\lambda=0$, hence the formula (\ref{meromreg}) makes sense for all $\mu\in\mathbb{C}$ and is holomorphic in $\mu$. If $T^\mu$ had a meromorphic expansion, then we write the Laurent series expansion of $\mu\mapsto T^\mu$ around some value $\mu_0\in\mathbb{C}$:
$$T^\mu =\sum_{k=-N}^{k=+\infty} (\mu-\mu_0)^k T^{\mu_0(k)} $$
but for all $\omega$ supported on $U\setminus I$,
$ \left\langle T^\mu,\omega \right\rangle$ is holomorphic at $\mu_0$ 
thus all the poles $( \left\langle T^{\mu_0(k)},\omega \right\rangle)_{k<0}$ must vanish ! $\forall\omega \in \mathcal{D}^k(U\setminus I),\forall k<0, \left\langle T^{\mu_0(k)},\omega \right\rangle=0$ which means $\forall k<0,\text{supp }T^{\mu_0(k)}$ does not meet $U\setminus I$ which yields the conclusion.
\end{proof}
\subsection{The meromorphic extension.}
 We set the stage for our next theorem which states that if $T$ is a Fuchsian symbol, then the holomorphic regularization formula of H\"ormander $\mu\mapsto T^\mu$ has a \textbf{meromorphic extension} in the complex parameter $\mu$. 
Let $T\in \mathcal{D}^\prime_k(U\setminus I)$ and if $T\in F_\Omega(U\setminus I)$ then we have by definition $T=\sum_0^N T_k+R_N$ where the error term $R_N\in E_{s+N+1}$ where $s=\inf Spec(\Omega)$. Notice that for any compactly supported test form $\omega$, the current $T\wedge\omega$ is also a Fuchsian symbol, and we have the expansion
$\forall N,\left(T\wedge\omega\right)=\sum_{j\leqslant N}(T\wedge\omega)_j+I_N(T\wedge\omega)$ where $(T\wedge\omega)_{js}=s^{j+\Omega}(T\wedge\omega)_j$ and the remainder $I_N(T\wedge\omega)\in E_{s+N+1}$. Following the notations of Chapter $1$, we denote by $\psi$ the function $(-\rho\chi)$.
%Set $\left\langle T^\mu,\omega \right\rangle=\int_{0}^1\frac{d\lambda}{\lambda}\lambda^\mu \left\langle T \psi_{\lambda^{-1}} ,\omega\right\rangle $.
\begin{thm}\label{meromthm}
If $T\in F_\Omega(U\setminus I)$ then $\mu\mapsto T^\mu $ has an extension as a distribution in $\mathcal{D}^\prime(U)$ and depends meromorphically in $\mu$ with poles in $-Spec(\Omega)-\mathbb{N}$.
\begin{eqnarray}\label{identitymeromcontinuation}
\forall p, \exists N,\left\langle T^\mu,\omega \right\rangle =\sum_{j\leqslant N}(\mu+j+\Omega)^{-1} \left\langle (T\wedge\omega)_{j} ,\psi    \right\rangle+\left\langle I^\mu_N(T\wedge\omega),\psi\right\rangle 
\end{eqnarray}
 where the identity is meromorphic in the domain $ \{Re(\mu)+p>0\}$.
\end{thm}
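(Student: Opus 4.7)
The plan is to reduce the analysis of $\langle T^\mu,\omega\rangle$ to a Mellin transform of a scalar function with controlled asymptotics at $\lambda=0$, and then peel off the leading asymptotic terms via the Fuchsian expansion of $T\wedge\omega$. The whole content is a bookkeeping exercise combined with the holomorphicity result of Theorem~\ref{holomorphicitythm}, once one establishes the key change-of-variables identity.

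First I would rewrite the integrand using the scaling conventions for currents. Since $\psi_{s^{-1}}\omega=(\psi\,\omega_{s})_{s^{-1}}$ and $\langle T_s,\varphi\rangle=\langle T,\varphi_{s^{-1}}\rangle$ for currents against forms, one gets
\begin{equation*}
\langle T\psi_{s^{-1}},\omega\rangle
=\langle T,\psi_{s^{-1}}\omega\rangle
=\langle T_s,\psi\,\omega_s\rangle
=\langle (T\wedge\omega)_s,\psi\rangle .
\end{equation*}
Hence $\langle T^\mu,\omega\rangle=\int_0^1\frac{ds}{s}\,s^\mu\,\langle (T\wedge\omega)_s,\psi\rangle$, the half-Mellin transform of the scalar-valued function $s\mapsto \langle (T\wedge\omega)_s,\psi\rangle$. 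Because $\omega$ is compactly supported and has a finite Taylor expansion and $T\in F_\Omega$, the current $T\wedge\omega$ is still Fuchsian and we have the expansion
\begin{equation*}
(T\wedge\omega)_s=\sum_{j\leqslant N} s^{j+\Omega}(T\wedge\omega)_j+I_N(T\wedge\omega)_s ,
\end{equation*}
where $I_N(T\wedge\omega)\in E_{s_0+N+1}$ with $s_0=\inf\mathrm{Spec}(\Omega)$.

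Substituting this expansion into the integral yields the two pieces of the identity. The explicit pieces are elementary matrix Mellin integrals
\begin{equation*}
\int_0^1 \frac{ds}{s}\, s^{\mu+j+\Omega}=(\mu+j+\Omega)^{-1},
\end{equation*}
valid whenever $\mathrm{Re}(\mu+j)+s_0>0$ and meromorphic in $\mu$ on all of $\mathbb{C}$ with poles exactly at $\mu\in -j-\mathrm{Spec}(\Omega)$ (the matrix inverse is computed from the Jordan decomposition of $\Omega$; since the spectrum is real, the poles are located in $-\mathrm{Spec}(\Omega)-\mathbb{N}$). The remainder piece $\langle I^\mu_N(T\wedge\omega),\psi\rangle=\int_0^1\frac{ds}{s}\,s^\mu\,\langle I_N(T\wedge\omega)_s,\psi\rangle$ is, by Lemma~\ref{convlemm} and Theorem~\ref{holomorphicitythm} applied to $I_N(T\wedge\omega)\in E_{s_0+N+1}$, absolutely convergent and holomorphic in the half-plane $\mathrm{Re}(\mu)+s_0+N+1+(k-n)>0$.

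Putting the two pieces together establishes the identity \eqref{identitymeromcontinuation} in the half-plane where all terms converge simultaneously; by uniqueness of holomorphic continuation, the identity persists as a meromorphic identity wherever both sides are defined. Given any $p$, choosing $N$ with $s_0+N+1+(k-n)>p$ makes the remainder holomorphic for $\mathrm{Re}(\mu)+p>0$ and exhibits $T^\mu$ there as a sum of a meromorphic function (with poles in $-\mathrm{Spec}(\Omega)-\mathbb{N}$) plus a holomorphic one. Letting $N\to\infty$ produces the meromorphic extension to the whole plane, with the announced pole locations. The main technical point to watch is the matrix-valued character of $(\mu+j+\Omega)^{-1}$ and the compatibility of the Fuchsian asymptotic expansion when evaluated against a test form; this is exactly where the fact that $T\wedge\omega$ remains Fuchsian (proved in the subsection on stability of Fuchsian symbols) is crucial, and where the freedom to truncate at arbitrary $N$ makes the construction work globally in $\mu$.
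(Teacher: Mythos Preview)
Your proposal is correct and follows essentially the same approach as the paper: the change of variables $\langle T\psi_{\lambda^{-1}},\omega\rangle=\langle (T\wedge\omega)_\lambda,\psi\rangle$, substitution of the Fuchsian expansion of $T\wedge\omega$, explicit evaluation of the matrix Mellin integrals $\int_0^1\frac{d\lambda}{\lambda}\lambda^{\mu+j+\Omega}=(\mu+j+\Omega)^{-1}$, and control of the remainder via Theorem~\ref{holomorphicitythm}. Your version is in fact slightly more explicit about the role of the Jordan decomposition and the choice of $N$ in terms of $p$, but the argument is the same.
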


\begin{proof}
Before we start proving anything, let us make a small comment on the principle used here. The key idea is \textbf{analytic continuation}, when two holomorphic functions $f_1,f_2$ defined on respective domains $U_1,U_2$ coincide on an open set, then there is a \textbf{unique} function $f$ (unique in the sense that any analytic continuation of $f_i,i=1,2$ must coincide with $f$ on their common domain of definition) defined on $U_1\bigcup U_2$ which extends $f_1,f_2$.
Recall that the exterior product $\left(T\wedge\omega\right)$ is a Fuchsian symbol
since $T\in F_{\Omega}$ is Fuchsian and $\omega$ is a smooth test form.
Thus $\lambda^{-\Omega}(T\wedge\omega)_\lambda$ has an asymptotic expansion in $\lambda$.
We expand $(T\wedge\omega)$ in order to extract the relevant first terms and the remainder of the asymptotic expansion.
$$ T\wedge\omega= \sum_{k=0}^N \underset{\text{killed by }\rho-k-\Omega}{\underbrace{(T\wedge\omega)_k}}+\underset{\in E_{N+\Omega+1}}{\underbrace{I_N(T\wedge\omega)}} $$
we replace 
this decomposition in the integral formula $\int_{0}^1\frac{d\lambda}{\lambda}\lambda^\mu \left\langle T \psi_{\lambda^{-1}} ,\omega\right\rangle $.
The computation gives:
$$\forall N, \int_{0}^1\frac{d\lambda}{\lambda}\lambda^\mu \left\langle T \psi_{\lambda^{-1}} ,\omega\right\rangle = \int_{0}^1\frac{d\lambda}{\lambda}\lambda^\mu \left\langle (T\wedge\omega), \psi_{\lambda^{-1}} \right\rangle  $$
$$=\int_{0}^1\frac{d\lambda}{\lambda}\lambda^\mu \left\langle (T\wedge\omega)_{\lambda}, \psi \right\rangle=\int_{0}^1\frac{d\lambda}{\lambda}\lambda^\mu \left(\sum_{j\leqslant N}\left\langle (T\wedge\omega)_{j\lambda}, \psi \right\rangle +\left\langle (I_N(T\wedge\omega))_{\lambda}, \psi \right\rangle\right) $$
$$=\sum_{j\leqslant N} \int_{0}^1\frac{d\lambda}{\lambda}\lambda^{\mu+\Omega+j} \left\langle (T\wedge\omega)_{j}, \psi \right\rangle + \int_{0}^1\frac{d\lambda}{\lambda}\lambda^{\mu}\left\langle (I_N(T\wedge\omega))_{\lambda}, \psi \right\rangle.$$
Then for $Re(\mu)$ large enough, the first $N+1$ integrals converge and can be computed
$$=\sum_{j\leqslant N} \underset{\text{poles when }\det(\mu+\Omega+j)=0 }{\underbrace{\left(\mu+\Omega+j\right)^{-1}}} \left\langle (T\wedge\omega)_{j}, \psi \right\rangle + \int_{0}^1\frac{d\lambda}{\lambda}\underset{O(\lambda^{N+1+\Omega+Re(\mu)})}{\underbrace{\lambda^{\mu}\left\langle (I_N(T\wedge\omega))_{\lambda}, \psi \right\rangle}} $$
where the remainder is integrable and holomorphic in $\mu$ in the half plane $Re(\mu)+N+1+\Omega>0$ by theorem (\ref{holomorphicitythm}).
Finally for all $N$, $\left\langle T^\mu,\omega \right\rangle$ has meromorphic continuation on  $Re(\mu)+N+1+\Omega>0$ hence it has meromorphic continuation everywhere on $\mathbb{C}$. 
\end{proof}
By a matrix conjuguation, 
we can always reduce $\Omega$ to its Jordan normal form $\Omega=G^{-1}(D+N)G$
where $D$ is diagonal and $N$ is a nilpotent matrix which commutes with $D$.
We set $(-d_i,n_i)_{i\in I}$ 
the eigenvalues of $\Omega$ 
with their respective multiplicities, 
hence $D$ is a diagonal matrix with eigenvalues $(-d_i)_i$. 
Note that if $0\in -Spec(\Omega)-\mathbb{N}$, then
$\mu=0$ is a pole of the meromorphic extension: $0=d_i-j$
where $j\in\mathbb{N}$ and $d_i$ is an eigenvalue of $\Omega$
with multiplicity $n_i$. 
\begin{prop}\label{ordn}
Let $\Omega\in M_n(\mathbb{C})$ 
and $T\in F_\Omega(U\setminus I)$. 
If $Spec\left(\Omega\right)\cap -\mathbb{N}=\emptyset$
then $T^\mu$ is holomorphic at $\mu=0$.
If $Spec\left(\Omega\right)\cap -\mathbb{N}\neq\emptyset$ 
then $T^\mu$ has a pole at $\mu=0$ of order
at most $n$.
\end{prop}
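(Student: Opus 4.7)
The plan is to exploit the explicit meromorphic identity furnished by Theorem \ref{meromthm} and reduce the question to an analysis of the resolvent $(\mu + j + \Omega)^{-1}$ near $\mu = 0$. First, I would fix $N \in \mathbb{N}$ large enough that the half-plane $\{Re(\mu) + N + 1 + \inf Spec(\Omega) > 0\}$ contains an open neighborhood of $\mu = 0$; then, by the proof of Theorem \ref{meromthm}, the remainder term $\left\langle I^\mu_N(T\wedge\omega),\psi\right\rangle$ is holomorphic there, so it contributes nothing to the singular part at $\mu = 0$. It therefore suffices to study the finite sum $\sum_{j \leqslant N} (\mu + j + \Omega)^{-1}\left\langle (T\wedge\omega)_j,\psi\right\rangle$.

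Next I would pass to the Jordan decomposition $\Omega = G^{-1}(D + N_0)G$, where $D$ is diagonal with eigenvalues $(-d_i)_{i\in I}$ of respective multiplicities $(n_i)_{i\in I}$ and $N_0$ is nilpotent commuting with $D$. Because $(\mu + j + \Omega)^{-1}$ is a rational matrix-valued function of $\mu$, the block-diagonal structure reduces the analysis to each Jordan block separately: on the block of size $n_i$ associated to the eigenvalue $-d_i$, $\Omega$ acts as $-d_i\,\mathrm{Id} + N_i$ with $N_i^{n_i} = 0$, so
\begin{equation}
(\mu + j + \Omega)^{-1}\bigr|_{\text{block}} \;=\; \sum_{k=0}^{n_i - 1} (-1)^k (\mu + j - d_i)^{-k-1}\, N_i^k .
\end{equation}
This expression is holomorphic at $\mu = 0$ unless $j - d_i = 0$, i.e.\ unless $-d_i = -j \in -\mathbb{N}$. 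If $Spec(\Omega) \cap -\mathbb{N} = \emptyset$, no such $j$ exists, so every resolvent term is holomorphic at $\mu = 0$ and $T^\mu$ is holomorphic there. If on the contrary $-d_i = -j_i \in -\mathbb{N}$ for some $i$, then the corresponding Jordan block contributes a pole at $\mu = 0$ of order exactly $n_i$, and hence at most $n = \sum_i n_i$.

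Finally, the full singular part at $\mu = 0$ of $\left\langle T^\mu,\omega\right\rangle$ is the sum, over those indices $i$ with $-d_i \in \mathbb{N}$ and $j = -d_i$ (so that $j \leqslant N$ if $N$ was chosen large enough), of the corresponding block contributions. Taking the maximum of the individual pole orders (which is the pole order of a sum of Laurent series), we obtain an order bounded by $\max_i n_i \leqslant n$. The only point requiring a moment of care is to check that the choice of $N$ can be made uniformly in $\omega$, which is immediate since $N$ depends only on $\inf Spec(\Omega)$; the rest is essentially linear algebra on Jordan blocks and does not present a genuine obstacle.
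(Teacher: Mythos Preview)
Your proof is correct and follows essentially the same route as the paper: reduce via the identity of Theorem~\ref{meromthm} to the resolvent terms $(\mu+j+\Omega)^{-1}$, pass to Jordan blocks, and expand $(\mu+N_i)^{-1}$ as a finite geometric series in the nilpotent part to read off the pole order $n_i\leqslant n$. Your write-up is in fact slightly more careful about choosing $N$ to make the remainder holomorphic near $\mu=0$ and about recording the sharper bound $\max_i n_i$; there is one small sign slip in the last paragraph (``$-d_i\in\mathbb{N}$'' should read ``$d_i\in\mathbb{N}$'', i.e.\ $-d_i\in-\mathbb{N}$), but it does not affect the argument.
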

\begin{proof}
We assume that $d_i-j=0$ 
for some eigenvalue $d_i\in Spec(\Omega)$ 
and some integer $j$.
Up to conjuguation and projection, 
the proof reduces to an elementary computation
in a generalized eigenspace $E_i$ of dimension $n_i$ 
associated
to the eigenvalue $-d_i$ s.t. 
$d_i-j=0$. 
Indeed, $\Omega|_{E_i}=-d_i+N_i$ 
where $N_i$ is a \emph{nilpotent} matrix of fixed order $n_i$.
$(\mu+\Omega+j)^{-1}|_{E_i}=(\mu+N_i)^{-1}=\mu^{-1}\left(\sum_{k=0}^{n_i-1}(-1)^k\mu^{-k} N_i^k\right)=\mu^{-1}+\dots+\mu^{-n_i}(-1)^{n_i-1}N_i^{n_i-1}$, so the worst singularity
is a pole of order at most $n_i$ in $\mu$.
\end{proof}
\begin{prop}\label{propimp}
The extension $T^\mu$ defined in the previous theorem satisfies the property $T^\mu\in F_{\Omega+\mu}$.
\end{prop}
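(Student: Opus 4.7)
The plan is to construct the Fuchsian decomposition of $T^\mu$ directly from the Fuchsian decomposition of $T$. For every fixed $N$, I start from $T = \sum_{k=0}^{N}T_k + R_N$ on a $\rho$-convex neighborhood $V$ of $I\cap\overline{U}$, where $(\rho-(\Omega+k))T_k = 0$ on $V$ and $R_N \in E_{s+N+1}(U)$. After a harmless cutoff $\eta\in C^\infty(U)$ equal to $1$ near $I$ and supported in $V$ (which lets me extend each $T_k$ to $U$ as $\eta T_k$ without affecting the Fuchsian structure near $I$), linearity of the meromorphic regularization gives
\begin{equation*}
T^\mu = \sum_{k=0}^{N}T_k^\mu + R_N^\mu,
\end{equation*}
and Lemma~\ref{convlemm} immediately yields $R_N^\mu \in E_{s+\mu+N+1}(U)$.

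The substantive step is to replace each $T_k^\mu$ (near $I$) by a genuinely homogeneous distribution. I introduce the full-range analog
\begin{equation*}
S_k^\mu(\omega) := \int_0^\infty \frac{d\lambda}{\lambda}\lambda^\mu\langle T_k\psi_{\lambda^{-1}},\omega\rangle,
\end{equation*}
which converges absolutely on the half-plane $\mathrm{Re}(\mu+\Omega+k)>0$ (using the identity $\langle T_k\psi_{\lambda^{-1}},\omega\rangle = \lambda^{\Omega+k}\langle T_k\wedge\omega_\lambda,\psi\rangle$ from the proof of Theorem~\ref{meromthm}) and is extended meromorphically in $\mu$ by the same Taylor expansion argument used there. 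Convergence at $\lambda=\infty$ is automatic since, for any $\omega$ of support radius $R$, the integrand vanishes as soon as $\lambda > R/a$, where $a$ is the inner radius of $\mathrm{supp}\,\psi$. Applying the change of variables $\lambda = \tau\tilde{\lambda}$ to both integration limits $0$ and $\infty$, and using the pure homogeneity of $T_k$, yields
\begin{equation*}
(S_k^\mu)_\tau = \tau^{\Omega+\mu+k}\,S_k^\mu, \qquad\text{i.e.}\qquad (\rho-(\Omega+\mu+k))\,S_k^\mu = 0.
\end{equation*}

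The crucial geometric observation is that $T_k^\mu - S_k^\mu = -\int_1^\infty(\cdots)$ vanishes on every test form supported in $\{|h|<a\}$: for any $\lambda\geq 1$ the support of $\psi_{\lambda^{-1}}$ lies in $\{a\lambda\leq|h|\leq b\lambda\}$, which is disjoint from $\mathrm{supp}\,\omega$, so the integrand is identically zero. Hence $T_k^\mu|_{V'} = S_k^\mu|_{V'}$ on the neighborhood $V' = \{|h|<a\}$ of $I$, and on $V'$ I recover
\begin{equation*}
T^\mu = \sum_{k=0}^{N}S_k^\mu + R_N^\mu,
\end{equation*}
which is exactly the Fuchsian decomposition with matrix $\Omega+\mu$: each $S_k^\mu$ is killed by $\rho-(\Omega+\mu+k)$ and the remainder lies in $E_{s+\mu+N+1}(U)$.

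The main obstacle I foresee is handling the meromorphic continuation of $S_k^\mu$ cleanly near its poles $\mu\in -(\Omega+k)-\mathbb{N}$: one has to expand $\omega_\lambda = \sum_n \lambda^n\omega_n + I_N(\omega)_\lambda$ and integrate term by term, exactly as in the proof of Theorem~\ref{meromthm}, both to identify the poles and to keep the scaling identity valid through the continuation. Once this is done, the equality $T_k^\mu|_{V'} = S_k^\mu|_{V'}$ persists as an identity of meromorphic families of distributions, and the Fuchsian expansion of $T^\mu$ on $V'$ holds for every generic~$\mu$, establishing $T^\mu \in F_{\Omega+\mu}$.
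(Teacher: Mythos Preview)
Your proof is correct and follows essentially the same approach as the paper: both reduce to the exact Fuchsian case and exploit the key support observation that $\psi_{\lambda^{-1}}$ is supported in $\{a\lambda\le|h|\le b\lambda\}$, so that for test forms supported in $\{|h|<a\}$ the upper integration limit is immaterial, yielding exact homogeneity of $T_k^\mu$ near $I$. The only cosmetic difference is that you introduce the auxiliary object $S_k^\mu=\int_0^\infty(\cdots)$ explicitly, whereas the paper performs the change of variables $\lambda\mapsto\lambda/\tau$ directly on $T^\mu$ and observes that $\int_0^{1/\tau}=\int_0^1$ on that neighborhood.
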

\begin{proof} 
To prove that $T^\mu\in F_{\Omega+\mu}$, it is enough to prove that if $T$ is a solution of $(\rho-\Omega)T=0$, then the meromorphic extension $T^\mu$ is solution of the equation
$(\rho-\Omega-\mu)T^\mu=0$ \textbf{on the domain} $\chi=1$. We try to scale $T^\mu$ and we compute $\tau^{-\Omega-\mu}  T^\mu(.,\tau.)$ where $T\in \mathcal{D}_k^\prime(U\setminus I)$ is exact Fuchsian $T_\lambda=\lambda^\Omega T$. First, it is not true that $T^\mu$ will scale exactly like $T^\mu_\tau=\tau^{\Omega+\mu}T^\mu$
everywhere in $U\setminus I$. However, in any $\rho$-stable domain $U$ for $\rho=h^j\partial_{h^j}$ in which $\chi|_U=1$, we will be able to find that $\forall \tau\in(0,1], T^\mu_\tau|_U=\tau^{\Omega+\mu}T^\mu|_U$. This can be understood in terms of section $T^\mu|_U$ of the \textbf{sheaf of currents} over the open set $U$. A typical example of such nice domains would be $K\times \{\vert h\vert\leqslant a\}\subset \mathbb{R}^n\times\mathbb{R}^d$ in the local chart $\mathbb{R}^{n+d}$ where the plateau function $\chi$ satisfies the support condition:
\begin{equation}
\chi_{K\times \{\vert h\vert\leqslant a\}}=1,\chi_{K\times \{\vert h\vert\geqslant b\}}=0 
\end{equation}  
for $0<a<b$.  
We pick a test form $\omega\in \mathcal{D}^\prime(U)$.$$\forall 0< \tau\leqslant 1  , \tau^{-\Omega-\mu} \left\langle T^\mu_\tau ,\omega \right\rangle=\tau^{-\Omega-\mu}  \left\langle T^\mu ,\omega_{\tau^{-1}} \right\rangle=\int_{0}^1\frac{d\lambda}{\lambda} \lambda^\mu\tau^{-\Omega-\mu} \left\langle T\psi_{\lambda^{-1}},\omega_{\tau^{-1}} \right\rangle $$ 
$$=\int_{0}^1\frac{d\lambda}{\lambda} \left(\frac{\lambda}{\tau}\right)^{\mu}\tau^{-\Omega} \left\langle T_\lambda\psi,\omega_{\lambda\tau^{-1}} \right\rangle =\int_{0}^1\frac{d\lambda}{\lambda} \left(\frac{\lambda}{\tau}\right)^{\mu} \left\langle T_{\lambda\tau^{-1}}\psi,\omega_{\lambda\tau^{-1}} \right\rangle $$
because $T$ is exact Fuchsian. Then by a change of variable, we obtain
 $$\tau^{-\Omega-\mu} \left\langle T^\mu_\tau ,\omega \right\rangle=\int_{0}^{\frac{1}{\tau}}\frac{d\lambda}{\lambda} \lambda^{\mu} \left\langle T\psi_{\lambda^{-1}},\omega \right\rangle $$
We notice that the condition on the support of $\chi$ implies $\psi=-\rho\chi$ is supported in $\{a\leqslant\vert h\vert\leqslant b\}\cap U$. Since $\psi$ is supported in $\{a\leqslant\vert h\vert\leqslant b\}\cap U$, 
$\psi_{\lambda^{-1}}$ is supported in $\{\lambda a\leqslant\vert h\vert\leqslant \lambda b\}\cap U$.
However, we also recall 
that $\omega$ is supported 
inside the domain $\{\vert h\vert\leqslant a\}$. 
$T\psi_{\lambda^{-1}}$ is supported in $\{\lambda a\leqslant \vert h\vert\leqslant \lambda b\}$
hence $\left\langle T \psi_{\lambda^{-1}} ,\omega \right\rangle $ vanishes when
$ \lambda\geqslant 1$.
Finally:
$$\tau^{-\Omega-\mu}  \left\langle T^\mu_\tau,\omega\right\rangle=\int_{0}^{\frac{1}{\tau}}\frac{d\lambda}{\lambda} \lambda^{\mu} \left\langle T\psi_{\lambda^{-1}},\omega \right\rangle=\int_{0}^{1}\frac{d\lambda}{\lambda} \lambda^{\mu} \left\langle T\psi_{\lambda^{-1}},\omega \right\rangle=\left\langle T^\mu,\omega\right\rangle$$
Notice that for $Re(\mu)$ large enough, all our integrals make sense when $\tau>0$ because the integrand viewed as a function of $\lambda$ is in $L^1([0,1])$.
Then by the principle of analytic continuation
$$\rho T^\mu-(\Omega+\mu)T^\mu=0 \text{ on }U$$
for $Re(\mu)$ large enough thus the same equation is satisfied by any meromorphic continuation of $T^\mu$ and the r.h.s. of the equation \ref{identitymeromcontinuation}
satifies the Fuchs equation $\rho T^\mu-(\Omega+\mu)T^\mu=0 $. 
%If $T$ is not pure Fuchsian, then decompose $T=T_s+R$ where $T_s$ is pure Fuchsian.
%Then we compute the action of the differential operator $\left(\rho-\mu-\Omega\right)\in\mathbb{M}$ on the extension $T^\mu$: $$\left(\rho-\mu-\Omega\right)T^\mu=\left(\rho-\mu-\Omega\right)T^\mu_s
% +\left(\rho-\mu-\Omega\right)R^\mu =0+ \left(\rho-\mu-\Omega\right)R^\mu$$
%because we saw that the extension $T_s^\mu$ is killed by $\left(\rho-\mu-\Omega\right)$.
%But then we use the proposition \ref{stabilitymelrose} to conclude that $\left(\rho-\mu-\Omega\right)T^\mu=\left(\rho-\mu-\Omega\right)R^\mu\in E_{s+Re(\mu)}$.
%This proves the last claim. 
\end{proof}

\section{The Riesz regularization.}
\subsubsection{Preliminary discussion.} 
Up to now, the meromorphic regularization operation seems not very interesting since it does not define an extension of the original current $T\in  F_\Omega(U\setminus I)$ from which
we started. In order to recover a genuine extension, we must somehow make $\mu$ tend to $0$ in
the meromorphic regularization of H\"ormander. In order to do this, we will have to subtract poles 
but fortunately these poles are local counterterms hence the subtraction operation does not affect
the extension outside the submanifold $I$. The procedure we are going to describe will be called Riesz regularization.
Let us consider a given $T\in F_\Omega(U\setminus I)$. 
If $-m-1<s\leqslant-m$, the extension procedure defined in Chapter $1$ which could be called the Hadamard finite part procedure is given by
\begin{equation} 
\left\langle \overline{T}_{\text{Hadamard}} ,\omega\right\rangle=\lim_{\varepsilon\rightarrow 0}\left\langle T(\chi-\chi_{\varepsilon^{-1}})  , I_m(\omega) \right\rangle+ \left\langle T(1-\chi) , \omega\right\rangle 
\end{equation}
whereas in the Riesz regularization, 
we first extend meromorphically in $\mu$, 
then we subtract the poles at $\mu=0$, 
and finally take the limit $\mu\rightarrow 0$.
\paragraph{Fundamental example.}
\begin{ex}\label{exRiesz}
To illustrate this section, 
we give our favorite example: 
we are going to Riesz regularize
the function 
$\frac{1}{h^n}$ following the classical approach of \cite{Gelfand}. 
First, we regularize by the formula 
$$\int_0^1 \frac{d\lambda}{\lambda}\lambda^\mu \left\langle\frac{1}{h^{n}}\psi_{\lambda^{-1}},\varphi \right\rangle + \left\langle\frac{1}{h^{n}}(1-\chi),\varphi \right\rangle $$
where $\mu\in\mathbb{C}$. 
We shall concentrate only
on the term $\int_0^1 \frac{d\lambda}{\lambda}\lambda^\mu \left\langle\frac{1}{h^{n}}\psi_{\lambda^{-1}},\varphi \right\rangle$:
$$\int_0^1 \frac{d\lambda}{\lambda}\lambda^\mu \left\langle\frac{1}{h^{n}}\psi_{\lambda^{-1}},\varphi \right\rangle
=
\int_0^1 \frac{d\lambda}{\lambda}\lambda^{\mu-n+1} \left\langle\frac{1}{h^{n}}\psi,\varphi_\lambda \right\rangle$$
$$=\int_0^1 \frac{d\lambda}{\lambda}\sum_{k=0}^N\frac{\lambda^{\mu-n+1+k}}{k!} \left\langle\frac{1}{h^{n}}\psi,h^k\partial^k_h\varphi(0)\right\rangle
+\int_0^1\frac{d\lambda}{\lambda} \lambda^{\mu-n+1}
\left\langle\frac{1}{h^{n}}\psi,I_{N,\lambda}\right\rangle
$$
Then for $Re(\mu)$ small enough, 
we can integrate the first $N$ terms:
$$\int_0^1 \frac{d\lambda}{\lambda}\lambda^\mu \left\langle\frac{1}{h^{n}}\psi_{\lambda^{-1}},\varphi \right\rangle + \left\langle\frac{1}{h^{n}}(1-\chi),\varphi \right\rangle $$ 
$$=\sum_{k=0}^N\frac{1}{(\mu-n+1+k)k!} \left\langle\frac{1}{h^{n}}\psi,h^k\partial^k_h\varphi(0)\right\rangle
+\text{ nice terms } .$$
At $\mu=0$, when $k=n-1$, we have a pole
$\frac{1}{\mu(n-1)!} \left\langle\frac{1}{h}\psi,\partial^{n-1}_h\varphi(0)\right\rangle $ 
of the Laurent series, and subtracting it 
allows us to define
the regularization:
$$\lim_{\mu\rightarrow 0}\int_0^1 \frac{d\lambda}{\lambda}\lambda^\mu \left\langle\frac{1}{h^{n}}\psi_{\lambda^{-1}},\varphi \right\rangle-\frac{1}{\mu(n-1)!} \left\langle\frac{1}{h}\psi,\partial^{n-1}_h\varphi(0)\right\rangle+\left\langle\frac{1}{h^{n}}(1-\chi),\varphi \right\rangle.$$ 
\end{ex}
We recall that
if $T^\mu$ is meromorphic at $\mu=0$
then the pole has order at most $n$ and 
$T^\mu$ is holomorphic in $B_r(0)\setminus \{0\}$
for $r$ small enough (since the poles of $T^\mu$
are located in $-Spec(\Omega)-\mathbb{N}$), 
then
$T^\mu=\sum_{k=-n}^{+\infty} \mu^k T^k$ where 
$\forall k\in\mathbb{Z}, T^k=\frac{1}{2i\pi} \int_{\partial B_r(0)} \frac{d\mu}{\mu^{k+1}}T^\mu.$
\begin{defi}
Let $T\in \mathcal{D}_k^\prime(U\setminus I)$ and $T\in F_\Omega(U\setminus I)$.
%Let $T_{i}$ be the projection of $T$ on the generalized eigenspace $E_i$ of the matrix $\Omega$.
%Let $\Omega|_{E_i}=-d_i+N_i$ where $N_i$ is \emph{nilpotent} matrix of fixed order $n_i$.
Then $T^\mu$ is \textbf{meromorphic} in $\mu$ by Theorem \ref{meromthm}
and the Riesz regularization is defined as
%\begin{equation}
%\left\langle R_{Riesz}T,\omega \right\rangle=\lim_{\mu\rightarrow 0}\left(\left\langle T^\mu,\omega \right\rangle -\sum_{-d_i \in Spec(\Omega),j+l=d_i}(\mu+N_i)^{-1}\left\langle T_{ij}\wedge\omega_{l},\psi \right\rangle\right)+ \left\langle T(1-\chi),\omega \right\rangle
%\end{equation}
\begin{equation}
\left\langle R_{Riesz}T,\omega \right\rangle=\lim_{\mu\rightarrow 0}\left(\left\langle T^\mu,\omega \right\rangle -\sum_{k=-n}^{-1}\mu^k \left\langle T^k,\omega\right\rangle\right) + \left\langle T(1-\chi),\omega \right\rangle.
\end{equation}
\end{defi}  
It is not completely obvious from its definition 
that $R_{Riesz}$ defines an extension operator.
\begin{prop}
For all $T\in \mathcal{D}_k^\prime(U\setminus I)\cap F_\Omega(U\setminus I)$,
$R_{Riesz}T$ is an extension of $T$.
\end{prop}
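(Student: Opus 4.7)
The plan is to show that $R_{Riesz}T$ agrees with $T$ on $U \setminus I$, i.e.\ that for every test form $\omega \in \mathcal{D}^k(U \setminus I)$ we have $\langle R_{Riesz}T,\omega\rangle = \langle T,\omega\rangle$. The key observation is that both the pole-subtraction terms $\sum_{k=-n}^{-1}\mu^k\langle T^k,\omega\rangle$ and the regularization apparatus become trivial as soon as the test form avoids $I$.

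First I would fix $\omega \in \mathcal{D}^k(U \setminus I)$, so that $\operatorname{supp}\omega \subset \{|h| \geq \delta\}$ for some $\delta > 0$. Since $\psi_{\lambda^{-1}}$ is supported in $\{a\lambda \leq |h| \leq b\lambda\}$, the product $\psi_{\lambda^{-1}}\omega$ vanishes identically for $\lambda \leq \delta/b$. Hence $\lambda \mapsto \langle T\psi_{\lambda^{-1}},\omega\rangle$ is smooth on $(0,1]$ and vanishes in a neighbourhood of $\lambda = 0$, so the integrand $\lambda^{\mu-1}\langle T\psi_{\lambda^{-1}},\omega\rangle$ is uniformly (in $\mu$ on compact sets) $L^1([0,1])$. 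Dominated convergence then yields that $\mu \mapsto \langle T^\mu,\omega\rangle$ is \emph{entire} in $\mu$, and taking $\mu \to 0$ inside the integral gives
\begin{equation*}
\lim_{\mu \to 0}\langle T^\mu,\omega\rangle \;=\; \int_0^1 \frac{d\lambda}{\lambda}\, \langle T\psi_{\lambda^{-1}},\omega\rangle.
\end{equation*}

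Next I would dispose of the counterterms: by Lemma \ref{locpoles} each coefficient $T^k$ (for $k<0$) of the Laurent expansion of $\mu \mapsto T^\mu$ at $\mu = 0$ is a distribution supported on $I$. Since $\operatorname{supp}\omega \cap I = \emptyset$, we get $\langle T^k,\omega\rangle = 0$ for every $k < 0$; equivalently, the entirety proved in the previous step shows directly that no poles have to be subtracted against such $\omega$. Therefore
\begin{equation*}
\langle R_{Riesz}T,\omega\rangle \;=\; \int_0^1 \frac{d\lambda}{\lambda}\, \langle T\psi_{\lambda^{-1}},\omega\rangle \;+\; \langle T(1-\chi),\omega\rangle.
\end{equation*}

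Finally, I would invoke the H\"ormander partition-of-unity identity established in Chapter~1 (the proposition giving $\langle t,\varphi\rangle = \int_0^1 \tfrac{d\lambda}{\lambda}\langle t\psi_{\lambda^{-1}},\varphi\rangle + \langle t(1-\chi),\varphi\rangle$ for $\varphi$ vanishing near $I$), applied to the current $T$ and the test form $\omega$ supported away from $I$, to conclude that the right-hand side equals $\langle T,\omega\rangle$, as desired. The only real subtlety is the dominated convergence at $\mu = 0$, and this is trivial here because $\omega$ kills the boundary $\lambda = 0$ of the integration range; everywhere else the argument is a direct reading-off of the definitions combined with the locality of the poles.
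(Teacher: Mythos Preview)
Your proof is correct and follows essentially the same route as the paper's own proof: take a test form $\omega$ supported away from $I$, use Lemma~\ref{locpoles} (locality of the poles) to kill the subtracted counterterms, pass to the limit $\mu\to 0$ in $\langle T^\mu,\omega\rangle$ to recover $\int_0^1\frac{d\lambda}{\lambda}\langle T\psi_{\lambda^{-1}},\omega\rangle=\langle T\chi,\omega\rangle$, and add back $\langle T(1-\chi),\omega\rangle$. Your write-up is in fact slightly more explicit than the paper's (you justify the entirety and the dominated-convergence step, whereas the paper just asserts the limit), but the underlying argument is identical.
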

\begin{proof}
Let $\omega$ be a test form supported in $U\setminus I$.
Then by lemma \ref{locpoles}, 
all poles of $\left\langle T^\mu,\omega \right\rangle$
vanish hence $\left\langle T^\mu,\omega \right\rangle$ 
is holomorphic in $\mu$ and $$\left\langle R_{Riesz}T,\omega \right\rangle
=\lim_{\mu\rightarrow 0}\left(\left\langle T^\mu,\omega \right\rangle -\sum_{k=-n}^{-1}\mu^k \left\langle T^k,\omega\right\rangle\right) + \left\langle T(1-\chi),\omega \right\rangle$$ $$=\lim_{\mu\rightarrow 0}\left(\left\langle T^\mu,\omega \right\rangle \right) 
+ \left\langle T(1-\chi),\omega \right\rangle
=\left\langle T\chi,\omega \right\rangle  + \left\langle T(1-\chi),\omega \right\rangle
=\left\langle T,\omega \right\rangle,$$
since $\lim_{\mu\rightarrow 0} \int_0^1 \frac{d\lambda}{\lambda} \lambda^\mu \left\langle T\psi_{\lambda^{-1}},\omega \right\rangle=\int_0^1 \frac{d\lambda}{\lambda}\left\langle T\psi_{\lambda^{-1}},\omega \right\rangle=\left\langle T\chi,\omega \right\rangle $.
\end{proof}
\paragraph{The anomalous scaling.}
Our next theorem 
is fundamental 
for quantum field theory
since it implies that 
if $T$ is a Fuchsian symbol
then its extension $R_{Riesz}T$ 
is also a Fuchsian symbol.
\begin{thm}
Let $\Omega\in M_n(\mathbb{C})$
where $Spec(\Omega)\in\mathbb{R}$.
For all $T\in \mathcal{D}_k^\prime(U\setminus I)\cap F_\Omega(U\setminus I)$,
if $(\rho-\Omega) T=0$
then
$R_{Riesz}T$ 
satisfies the equation 
$(\rho-\Omega)R_{Riesz}T=0$ when  $Spec(\Omega)\cap-\mathbb{N}=\emptyset$ and 
$(\rho-\Omega)^{n+1} R_{Riesz}T=0$ when $Spec(\Omega)\cap-\mathbb{N}\neq \emptyset$.
\end{thm}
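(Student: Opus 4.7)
The plan is to read the desired equation off the Laurent expansion of $T^{\mu}$ at $\mu=0$, exploiting the fact that $T^\mu$ already satisfies a deformed Fuchs equation by Proposition \ref{propimp}. First I would localize: on the open set $\{\chi=1\}$, which is a neighborhood of $I$, the cutoff term $T(1-\chi)$ vanishes, so $R_{Riesz}T$ coincides there with the distributional coefficient $T^{0}$ of $\mu^{0}$ in the Laurent expansion of $T^{\mu}$. By Proposition \ref{ordn} the pole order at $\mu=0$ is at most $n$, so
\begin{equation*}
T^\mu = \sum_{k=-n}^{\infty}\mu^{k}\,T^{k}
\end{equation*}
with each $T^{k}\in\mathcal{D}^{\prime}_{k}(U)$ obtained as a Cauchy contour integral.

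Next I would inject this Laurent expansion into the identity $(\rho-\Omega-\mu)T^{\mu}=0$ valid on $\{\chi=1\}$ (Proposition \ref{propimp}; established for $\operatorname{Re}\mu$ large and extended by analytic continuation as a meromorphic identity). Collecting the coefficient of $\mu^{j}$ on $\{\chi=1\}$ yields the recursion
\begin{equation*}
(\rho-\Omega)\,T^{j} = T^{j-1}\qquad(j\geqslant -n),
\end{equation*}
with the convention $T^{-n-1}=0$. Iterating gives $(\rho-\Omega)^{j+n+1}T^{j}=0$ on $\{\chi=1\}$ for every $j\geqslant -n$; taking $j=0$ delivers $(\rho-\Omega)^{n+1}T^{0}=0$, hence $(\rho-\Omega)^{n+1}R_{Riesz}T=0$ in that neighborhood of $I$. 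When $\operatorname{Spec}(\Omega)\cap -\mathbb{N}=\emptyset$, Proposition \ref{ordn} tells us $T^{-1}=0$, so the recursion at $j=0$ collapses to $(\rho-\Omega)T^{0}=0$, giving the sharper statement $(\rho-\Omega)R_{Riesz}T=0$.

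Finally I would observe that on $U\setminus I$ the residual polar terms $T^{k}$ for $k<0$ vanish (by Lemma \ref{locpoles}, their support is contained in $I$), so that $T^{0}|_{U\setminus I}=\lim_{\mu\to 0}T^{\mu}|_{U\setminus I}=T\chi|_{U\setminus I}$, and consequently $R_{Riesz}T=T\chi+T(1-\chi)=T$ on $U\setminus I$, where the equation $(\rho-\Omega)T=0$ holds by hypothesis. Hence the Fuchs equation (or its iterated version) is verified on a full neighborhood of $I$ in $U$, which is exactly what is needed to conclude $R_{Riesz}T\in F_{\Omega}(U)$.

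The main technical obstacle I anticipate is the legitimacy of the term-by-term identification of coefficients in the Laurent series after applying the differential operator $(\rho-\Omega-\mu)$. This is justified by the analytic continuation principle: the equation $(\rho-\Omega-\mu)T^{\mu}=0$ is established for $\operatorname{Re}\mu$ sufficiently large where the defining integral converges absolutely (Lemma \ref{convlemm} and Theorem \ref{holomorphicitythm}), and both sides depend meromorphically on $\mu$ as elements of $\mathcal{D}'_{k}(U)$ when tested against any $\omega\in\mathcal{D}^{k}(\{\chi=1\})$; since they agree on a half-plane they agree as meromorphic families, and the Laurent coefficients of $0$ are $0$, producing the recursion above. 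Once this is cleanly formulated, the rest is pure bookkeeping on the nilpotent structure of $\mu^{-1}(\rho-\Omega)$ acting on the principal part.
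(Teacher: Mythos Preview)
Your proposal is correct and follows essentially the same route as the paper: localize to $\{\chi=1\}$ where $R_{Riesz}T=T^{0}$, plug the Laurent expansion of $T^{\mu}$ into the identity $(\rho-\Omega-\mu)T^{\mu}=0$ from Proposition~\ref{propimp}, read off the recursion $(\rho-\Omega)T^{k}=T^{k-1}$, and iterate. Your treatment is in fact slightly more explicit than the paper's on two points---the analytic-continuation justification for matching Laurent coefficients, and the invocation of Lemma~\ref{locpoles} to handle $U\setminus I$---but the argument is the same.
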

\begin{proof}
By the proof of \ref{propimp}, 
we know that 
$(\rho-\Omega)T=0$ implies
\begin{equation}\label{equacle}
(\rho-\mu-\Omega)T^\mu=0 
\end{equation}
on some neighborhood $V$ of $I$ 
provided $V$ is stable by scaling 
and $\chi|_U=1$. 
Then the trick consists in replacing
$T^\mu$ by its Laurent series expansion 
in equation \ref{equacle}.
$$(\rho-\Omega-\mu)T^\mu=(\rho-\Omega-\mu)\left(\sum_{k=-n}^{+\infty} \mu^k T^k\right)  $$
\begin{equation}\label{equadiff}
=(\rho-\Omega-\mu)\left(\sum_{k=-n}^{-1} \mu^k T^k + T^0 + O(\mu) \right)=0.
\end{equation}
Notice that the constant term in the Laurent series expansion $T^0=R_{Riesz}T-T(1-\chi)$
therefore on $V$, we have $T^0=R_{Riesz}T$ since $1-\chi|_V=0$.
By \textbf{uniqueness of the Laurent series expansion}, we expand the equation 
(\ref{equadiff})
in powers of $\mu$:
$$(\rho-\Omega)T^{-n}\mu^{-n}+\sum_{k=-n+1}^{0} \mu^k\left((\rho-\Omega)T^{k}-T^{k-1}\right)+ O(\mu)=0$$
and we require that all coefficients of the Laurent series expansion
should vanish.
Hence we find a system of equations:
\begin{eqnarray}\label{syst}
(\rho-\Omega)T^{-n}=0
\\\forall k, -n+1\leqslant k\leqslant 0, \left((\rho-\Omega)T^{k}-T^{k-1}\right)=0.
\end{eqnarray} 
Then for $T^0=R_{Riesz}T$ on $V$, 
we have $(\rho-\Omega)T^0=(\rho-\Omega)R_{Riesz}T=T^{-1}$.
Also note that 
on the complement of $V$, $(\rho-\Omega)R_{Riesz}T=0$ since $R_{Riesz}T=T$ because $R_{Riesz}T$ is an extension of $T$. Thus we have globally
$(\rho-\Omega)R_{Riesz}T=T^{-1}$. Now the key fact is that if $Spec(\Omega)\cap-\mathbb{N}=\emptyset$
then
$T^{-1}=0$ since $T^\mu$ has no poles at $\mu=0$.
Finally, if $Spec(\Omega)\cap-\mathbb{N}\neq\emptyset$ then by an easy recursion:
$$(\rho-\Omega)^{n+1}R_{Riesz}T=(\rho-\Omega)^{n}T^{-1}=(\rho-\Omega)^{n-1}T^{-2}=\dots
=(\rho-\Omega)T^{-n}=0, $$
which is the final equation we wanted to find. 
\end{proof}
\begin{ex}
We pick again our example of $T=\frac{1}{h^n}$, its 
Riesz extension satisfies the differential equations
$$(\rho+n)R_{Riesz}T=\left\langle\frac{1}{h} ,\psi\right\rangle\frac{1}{(n-1)!}\partial_h^{n-1}\delta_0$$
and
$$(\rho+n)^2R_{Riesz}T=0. $$
\end{ex}
\paragraph{The residue equation.}
  
A small comment before we state anything. 
The role of the poles seems to disappear since 
we subtract them in order 
to define the Riesz regularization, 
however they come back with a revenge
when we compute the
residue or anomaly of the Riesz
regularization. Following 
the philosophy of Chapter 8,
we \textbf{define} the residues of $R_{Riesz}$
for the vector field $\rho$
by the simple equation:
$\mathfrak{Res}_\rho[T]= \rho(R_{Riesz} T)- R_{Riesz} (\rho T)$. 
\begin{thm}
Let $T\in F_\Omega(U\setminus I)$  
and $T^{-1}$ is the coefficient of $\mu^{-1}$ in the Laurent series
expansion of 
the 
meromorphic function 
$T^\mu$ around $\mu=0$. 
Then 
$R_{Riesz}$ satisfies the residue equation
\begin{equation}
\mathfrak{Res}_\rho[T]=T^{-1}.
%\sum_{-d_i\in Spec(\Omega),j+l=d_i} (\mu+N_i)^{-1}\left\langle T_{ij}\psi,\omega_{l} \right\rangle
\end{equation}
In 
particular the residue vanishes 
when $Spec(\Omega)\cap-\mathbb{N}
=\emptyset$.
\end{thm}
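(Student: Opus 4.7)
The plan is to derive a single functional identity relating $\rho T^\mu$, $(\rho T)^\mu$ and $T^\mu$, and then to read off the residue by equating constant terms in the Laurent expansion at $\mu=0$. The identity I aim for is
\begin{equation*}
\rho\,T^\mu \;=\; (\rho T)^\mu \;+\; \mu\,T^\mu \;+\; T\,\rho\chi,
\end{equation*}
understood first as a relation between distributions holomorphic in $\mu$ on a right half-plane $\mathrm{Re}(\mu)\gg 0$, and then extended meromorphically by Theorem \ref{meromthm}, since every term lies in $F_\Omega$ (so $(\rho T)^\mu$ is meromorphic with poles in $-\mathrm{Spec}(\Omega)-\mathbb N$ by the same argument as for $T^\mu$).

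To establish the identity, the first step is a Leibniz computation: from $\rho(T\psi_{\lambda^{-1}})=(\rho T)\psi_{\lambda^{-1}}+T\,\rho\psi_{\lambda^{-1}}$ and the key observation that $\rho\psi_{\lambda^{-1}}=-\lambda\partial_\lambda\psi_{\lambda^{-1}}$, I would integrate by parts in the scale variable $\lambda$. For $\mathrm{Re}(\mu)$ large enough the boundary term at $\lambda=0$ vanishes (this is precisely the Fuchsian moderate growth hypothesis, which is why $T\in F_\Omega$ matters), while the boundary term at $\lambda=1$ contributes $-\langle T\psi,\omega\rangle=\langle T\rho\chi,\omega\rangle$, and the remaining $\lambda^{\mu-1}$ factor produces $\mu T^\mu$. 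This yields the displayed identity for $\mathrm{Re}(\mu)$ large, and analytic continuation does the rest.

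Next, I would unwind $R_{\mathrm{Riesz}}$ in terms of the constant term $T^0$ of the Laurent expansion. Since $R_{\mathrm{Riesz}}T=T^0+T(1-\chi)$, an application of the Leibniz rule gives
\begin{equation*}
\rho R_{\mathrm{Riesz}}T-R_{\mathrm{Riesz}}(\rho T)=\rho T^0-(\rho T)^0-T\rho\chi,
\end{equation*}
where the $(1-\chi)$-contributions cancel up to the $T\rho\chi$ term produced by differentiating $(1-\chi)$. Substituting the expansions $T^\mu=\sum_{k\geqslant -n}\mu^k T^k$ and $(\rho T)^\mu=\sum_k\mu^k(\rho T)^k$ into the functional identity and equating the coefficients of $\mu^0$ gives $\rho T^0=(\rho T)^0+T^{-1}+T\rho\chi$, so that the right-hand side above collapses exactly to $T^{-1}$.

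The only subtlety I expect is the justification of the integration by parts at the lower endpoint $\lambda=0$ uniformly in the test form, together with the exchange of the Lie derivative $\rho$ and the Mellin integral; both are handled by working initially in the half-plane of absolute convergence provided by Lemma \ref{convlemm} and then invoking meromorphic continuation via Theorem \ref{meromthm}. The vanishing statement when $\mathrm{Spec}(\Omega)\cap -\mathbb N=\emptyset$ is then immediate from Proposition \ref{ordn}, which guarantees that $T^\mu$ is holomorphic at $\mu=0$ and hence $T^{-1}=0$.
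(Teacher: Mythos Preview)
Your proof is correct and takes a genuinely different route from the paper's. The paper's argument decomposes $T=\sum_k T_k$ into exact Fuchsian pieces with $(\rho-\Omega-k)T_k=0$, then invokes the system of equations \eqref{syst} established in the preceding anomalous scaling theorem (coming from $(\rho-\Omega-\mu)T^\mu=0$ on a neighborhood $V$ of $I$ where $\chi=1$, via Proposition \ref{propimp}); this gives $(\rho-\Omega-k)R_{Riesz}T_k=T_k^{-1}$ on $V$, and since $\rho T_k=(\Omega+k)T_k$ the left side is exactly $\rho R_{Riesz}T_k-R_{Riesz}\rho T_k$. Your approach instead derives the single global identity $\rho T^\mu=(\rho T)^\mu+\mu T^\mu+T\rho\chi$ by integration by parts in the scale variable $\lambda$, and then reads off the $\mu^0$-coefficient directly. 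This is more self-contained: it avoids the decomposition into homogeneous pieces, avoids the localisation to the region $\chi=1$, and makes the origin of the residue (a boundary term at $\lambda=1$) completely transparent. The paper's route is shorter only because it recycles equations already proved for the anomalous scaling theorem. One small point worth tightening in your write-up: the claim that $\rho T\in F_\Omega$ (needed for the meromorphic continuation of $(\rho T)^\mu$) is immediate on the exact pieces $T_k$ since $\rho T_k=(\Omega+k)T_k$, but you should say a word about why the remainders $\rho R_N$ stay in $E_{s+N+1}$ (this follows from $(\rho R_N)_\lambda=\lambda\partial_\lambda(R_N)_\lambda$ and differentiating the asymptotic expansion, or simply by noting that $\rho$ preserves each $E_s$).
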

Comment: the residue $\mathfrak{Res}[T]$ is 
the \emph{holomorphic residue} 
of $T^\mu$ at $\mu=0$.
 
\begin{proof}
By Proposition (\ref{ordn}), the residue vanishes
if $-Spec(\Omega)\cap\mathbb{N}=\emptyset$ because $T_k^\mu$ admits no pole at $\mu=0$
thus $R_{Riesz}T_k$ satisfies the same equation as $T_k$, thus 
$(\rho-\Omega-k)R_{Riesz}T_k=0=\rho R_{Riesz}T_k-R_{Riesz}\rho T_k$.
If $k\in -Spec(\Omega)\cap\mathbb{N}$,
then by equation \ref{syst}, $\rho R_{Riesz}T_k- R_{Riesz}\rho T_k=(\rho-\Omega-k)R_{Riesz}T_k=T_k^{-1} $ which yields the result.
\end{proof}
\section{The $\log$ and the 1-parameter RG.}  
%  In the general case where $\Omega$ is not necessarily diagonalizable, we \textbf{cannot hope} that the poles are going to be $\chi$ independent.
%However, this is not a curse ! The nilpotent part $N_i$ will bring in $\log$ terms which are  responsible for the one parameter renormalization group (according to B Delamotte).
Let us fix $\rho$ and a current $T\in \mathcal{D}^\prime_k(U\setminus I)\cap  F_\Omega(U\setminus I)$. 
Once we fix the function $\chi$ and the Euler vector field $\rho$, we can renormalize following the Riesz extension since $T\in F_\Omega(U\setminus I)$, this is called choosing a \emph{renormalization scheme}.
% 
% We will describe this arbitrariness as follows.
%The difference beetween any extension procedure 
%is a continuous linear map from the given space of currents 
%with value currents supported on $I$.   
% 
% We need to refine the Hadamard extension 
%for currents  
%in $F_\Omega(U\setminus I)$:  
%\begin{defi}  
%
% Let $F_\Omega(U\setminus I)$ be the infinite dimensional space of approximate Fuchsian currents for 
%the Fuchsian equation $\rho-\Omega$ where $\Omega$ is assumed to be in the Jordan normal form. 
%
% For each decomposition of this space as a direct sum $F_\Omega(U\setminus I)=F^s_\Omega(U\setminus I)\oplus F^r_\Omega(U\setminus I)$ of a space of singular currents and regular currents where 
%$F^r_\Omega(U\setminus I)\subset E_s$ where $s>()$.
%
% Then we can compose the Hadamard scheme with projector on $F^s_\Omega(U\setminus I)$. The remainder is extended by continuity (without subtraction of counterterms).
% 
% For $T=T_s+R$ where $T_s\in F^s_\Omega(U\setminus I), R\in F^r_\Omega(U\setminus I)$:
% 
%\begin{equation}
%\left\langle R_{Hadamard}T,\omega \right\rangle=\lim_{\varepsilon\rightarrow 0}\left(\left\langle  T_s(\chi-\chi_{\varepsilon^{-1}}) , I_m(\omega)\right\rangle+\left\langle  R(1-\chi_{\varepsilon^{-1}}) , \omega\right\rangle\right) +\left\langle  T_s(1-\chi) , \omega \right\rangle}
%\end{equation} 
%  
%\end{defi}  
%  
But in contrary to the flat case, if we change the Euler field $\rho$ and the function $\chi$, we change the renormalization scheme, and the extensions will differ by a \textbf{local counterterm} which is a distribution supported on $I$. We thus have some infinite dimensional space of choices. But if $\chi,\rho$ and the extension $R_{Riesz}$ is choosed, 
then we still have a one dimensional degree of freedom left when we \textbf{scale the cut-off function} $\chi$ by the flow $\chi \mapsto e^{\rho\log\ell*} \chi,\ell \in\mathbb{R}^{+\star}$
which changes the \emph{length scale} of our renormalization.
The idea of scaling the function $\chi$ by the one parameter group $e^{\log\ell \rho}$ was inspired
by the reading of unpublished lecture notes of John Cardy \cite{Cardynotes} and \cite{Cardy} Chapter 5 section (5.2). 
The mechanism we are going to explain allows to relate the Bogoliubov, Epstein-Glaser technique with the  1-parameter renormalization group of Bogoliubov Shirkov.   
  
\begin{ex}
 Let us give some important comment on the physical meaning of the variable $\ell$ in the case where the manifold is a configuration space $M^2$ and $I=d_2$ is the diagonal of $M^2$. When $\ell\rightarrow \infty$, the function $\chi_\ell$ will have a \textbf{support shrinking}
to the diagonal $d_2$. This means that we must think of $\ell^{-1}$ in terms of characteristic length beetween pair of points $(x,y)\in M^2$ (think of them in terms of particles in the hard ball model, see \cite{Cardy} p.~88). Then according to this interpretation $\ell\rightarrow \infty$ should be called UV flow whereas $\ell\rightarrow 0$ is the IR flow. We describe the simple example of the amplitude
$\left\langle \phi^2(x)\phi^2(y) \right\rangle $ in the flat Euclidean case:
$$\begin{array}{|c|c|c|}
\hline \text{Cardy poor man's renorm}  & \text{Our approach} & \text{Costello Heat kernel}  \\
\hline \int_{M^2\setminus \{\vert x-y\vert\geqslant 
\ell \}}\Delta^2(x,y)g(x)g(y)d^4xd^4y &  \left\langle R^\ell\Delta^2 , g\otimes g\right\rangle & \frac{1}{2}\int_\ell^\infty \frac{dt}{t}t^2 \left\langle K_t , g\otimes g\right\rangle \\
\hline
\end{array}$$
In Costello's approach (\cite{Costello} (4.2) p.~43), 
$K_t$ is the Heat kernel and 
the UV regularized two point function in 
the massless case is 
given by the formula $\int_\ell^\infty dt K_t$ .
\end{ex}  
  
 Let $T$ be a given current $T\in \mathcal{D}^\prime_k(U\setminus I)$. For each function $\chi$ such that $\chi=1$ 
in a neighborhood of $I$ 
and vanishes 
outside a tubular neighborhood of $I$, we denote by
$R^\ell_{Riesz} $ the corresponding Riesz regularization operator 
constructed with $\chi_\ell$:
$$\left\langle R^\ell_{Riesz}T,\omega \right\rangle=\lim_{\mu\rightarrow 0}\left(1-\sum_{k=-n}^{-1}\int_{\partial B(0,r)}\frac{d\mu}{2i\pi\mu^{k+1}}\right)\int_0^1\frac{d\lambda}{\lambda}\lambda^\mu T\psi_{\ell\lambda^{-1}}  +T(1-\chi_\ell).$$
We shall state the renormalization group flow theorem for the Riesz regularization. The residue $\mathfrak{Res}$ 
appears when we scale the bump function $\chi$.
\begin{thm}
Let $T\in F_\Omega(U\setminus I)$ and $\forall\ell\in\mathbb{R}_{>0}$, the
residue
$\mathfrak{Res}_\rho[T](\ell)=\rho R_{Riesz}^{\ell}T- R_{Riesz}^{\ell}\rho T$. 
Then 
both 
$R_{Riesz}^{\ell},\mathfrak{Res}_\rho[T](\ell)$
satisfy the differential 
equations
\begin{eqnarray}
\ell\frac{d}{d\ell}R_{Riesz}^{\ell}T= \mathfrak{Res}_\rho[T](\ell)\\
\left(\ell\frac{d}{d\ell}\right)^{n}\mathfrak{Res}_\rho[T](\ell)
=0.
\end{eqnarray}
Thus $R_{Riesz}^{\ell}T$ 
scales like a polynomial
of $\log\ell$ of degree $n$:
\begin{equation}
R_{Riesz}^{\ell}T=R_{Riesz}^{1}T+\sum_{k=1}^n\frac{(\log\ell)^k}{k!}\left(\ell\frac{d}{d\ell} \right)^k\mathfrak{Res}_\rho[T](1)
\end{equation}
where the divergent part is a polynomial of degree $n$ in $\log\ell$ 
with coefficients local counterterms. 
\end{thm}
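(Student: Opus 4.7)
The plan is to reduce the theorem to the anomalous scaling equation $(\rho-\Omega)^{n+1}R^{\ell}_{Riesz}T = 0$ established in the preceding theorem, by identifying infinitesimal scaling of the cut-off $\chi_{\ell}$ with the spatial Euler derivation $\rho$, modulo currents supported on $I$. Conceptually, the residue plays the role of the infinitesimal generator of the one-parameter group $\ell \mapsto R^{\ell}_{Riesz}T$, so the whole statement reduces to a clean linear ODE with constant coefficients in the variable $\log\ell$ with values in currents.

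For the first identity $\ell\frac{d}{d\ell}R^{\ell}_{Riesz}T = \mathfrak{Res}_{\rho}[T](\ell)$, I would differentiate the explicit Riesz formula $\langle R^{\ell}_{Riesz}T,\omega\rangle = \mathrm{FP}_{\mu=0}\int_0^1 \frac{d\lambda}{\lambda}\lambda^{\mu}\langle T\psi_{\ell\lambda^{-1}},\omega\rangle + \langle T(1-\chi_{\ell}),\omega\rangle$ directly. The fundamental identities $\ell\frac{d}{d\ell}\chi_{\ell} = \rho\chi_{\ell}$ and $\psi_{\ell} = -\rho\chi_{\ell}$ propagate the derivative inside the integrand and the boundary term; after an integration by parts in $\lambda$ based on the relation $\lambda\frac{d}{d\lambda}\psi_{\ell\lambda^{-1}} = -\ell\frac{d}{d\ell}\psi_{\ell\lambda^{-1}}$, the right hand side regroups into $\rho R^{\ell}_{Riesz}T - R^{\ell}_{Riesz}(\rho T)$, which is by definition $\mathfrak{Res}_{\rho}[T](\ell)$. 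The commutation of the $\ell$-derivative with the $\mathrm{FP}_{\mu=0}$ finite part follows from the uniform meromorphy of the Mellin integral in the joint variable $(\mu,\ell)$, which itself is a consequence of the closed-form expansion derived in Theorem \ref{meromthm}.

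For the second identity, I would combine iterated application of the first part with the Fuchsian scaling equation of the preceding theorem. Iterating gives $(\ell\frac{d}{d\ell})^{k+1}R^{\ell}_{Riesz}T = (\ell\frac{d}{d\ell})^{k}\mathfrak{Res}_{\rho}[T](\ell)$ for every $k\geq 0$. The scaling covariance $T^{\mu}_{\tau}|_{V} = \tau^{\Omega+\mu}T^{\mu}|_{V}$ established in Proposition \ref{propimp} implies, upon taking the finite part in $\mu$ and then differentiating in $\ell$, that on the region $\{\chi_{\ell} = 1\}$ the derivation $\ell\frac{d}{d\ell}$ acts on $R^{\ell}_{Riesz}T$ as the spatial operator $\rho - \Omega$ acts, up to a counterterm supported on $I$. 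Hence $(\ell\frac{d}{d\ell})^{n+1}R^{\ell}_{Riesz}T$ inherits the vanishing from $(\rho-\Omega)^{n+1}R^{\ell}_{Riesz}T = 0$, yielding $(\ell\frac{d}{d\ell})^{n}\mathfrak{Res}_{\rho}[T](\ell) = 0$. The main obstacle I anticipate is precisely making this correspondence rigorous globally, since Proposition \ref{propimp} only gives scaling covariance on the plateau region of $\chi$; the contribution from outside this region must be checked to be a local counterterm, which is tractable via the representation theorem \ref{repsthm} together with the identification $\mathfrak{Res}_{\rho}[T] = T^{-1}$ of the previous theorem and Lemma \ref{locpoles}.

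The explicit closed form is then obtained by solving the constant-coefficient linear ODE $(\ell\frac{d}{d\ell})^{n+1}F(\ell) = 0$: in the variable $\tau := \log\ell$ this reads $\partial_{\tau}^{n+1}F(e^{\tau}) = 0$, whose general solution is a polynomial of degree at most $n$ in $\tau$. Matching the Taylor expansion at $\ell = 1$ against the initial conditions $R^{1}_{Riesz}T$ and the iterated residues $(\ell\frac{d}{d\ell})^{k}\mathfrak{Res}_{\rho}[T](1)$ for $1\leq k\leq n$ produces the stated formula verbatim. Finally, the claim that the coefficients of $(\log\ell)^{k}$ for $k\geq 1$ are local counterterms follows because each iterated residue $(\ell\frac{d}{d\ell})^{k}\mathfrak{Res}_{\rho}[T](1)$ is supported on $I$: the base case $k=1$ is the identification $\mathfrak{Res}_{\rho}[T] = T^{-1}$ combined with Lemma \ref{locpoles}, and higher iterates preserve this support since $\ell\frac{d}{d\ell}$ does not enlarge the spatial support of a current.
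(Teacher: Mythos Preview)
Your proposal is correct; the organization differs from the paper's but the content is essentially equivalent.

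For the first identity, your direct Leibniz computation works and is arguably cleaner than what the paper does. The paper instead (after reducing to exact Fuchsian $(\rho-\Omega)T=0$) establishes the single meromorphic identity $\ell\frac{d}{d\ell}T^{\mu,\ell}=\mu T^{\mu,\ell}$ by writing $\ell\frac{d}{d\ell}T^{\mu,\ell}=\rho T^{\mu,\ell}-\Omega T^{\mu,\ell}$ and then invoking Proposition~\ref{propimp}. Expanding both sides in Laurent series around $\mu=0$ yields the shift relation $\ell\frac{d}{d\ell}T^{k,\ell}=T^{k-1,\ell}$, and identifying $T^{0,\ell}=R^{\ell}_{Riesz}T$ (on the plateau) and $T^{-1,\ell}=\mathfrak{Res}_\rho[T](\ell)$ gives the first equation. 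Your route obtains the same endpoint without the exact Fuchsian reduction at this stage.

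For the second identity, the paper simply iterates the shift relation down the Laurent ladder: $(\ell\frac{d}{d\ell})^{n}\mathfrak{Res}_\rho[T](\ell)=(\ell\frac{d}{d\ell})^{n}T^{-1,\ell}=\ell\frac{d}{d\ell}T^{-n,\ell}=0$, the last equality because $T^{-n,\ell}$ is the deepest pole. Your route via the anomalous scaling equation $(\rho-\Omega)^{n+1}R^{\ell}_{Riesz}T=0$ also works, and your plateau worry is unnecessary: once you have the first identity and reduce to exact Fuchsian, $\ell\frac{d}{d\ell}R^{\ell}_{Riesz}T=\rho R^{\ell}_{Riesz}T-R^{\ell}_{Riesz}(\Omega T)=(\rho-\Omega)R^{\ell}_{Riesz}T$ holds \emph{globally}, not just on the plateau; iterating then gives $(\ell\frac{d}{d\ell})^{n+1}R^{\ell}_{Riesz}T=(\rho-\Omega)^{n+1}R^{\ell}_{Riesz}T=0$ directly. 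The paper's Laurent-shift argument is marginally more self-contained since it does not call back to the preceding theorem, but both are valid.
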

\begin{proof}
From the decomposition $T=\sum_0^\infty T_j$
where $\forall j, (\rho-\Omega-j)T_j=0$, by linearity of the Riesz extension and by the fact that $Res_\rho[T_j]$
vanishes for $j$ large enough,
we can
reduce the proof to an element $T\in F_\Omega(U\setminus I)$ killed by $\rho-\Omega$.
$$ \ell\frac{d}{d\ell}\left(T^{\mu,\ell} +  T(1-\chi_\ell)  \right)
= \ell\frac{d}{d\ell}T^{\mu,\ell}$$ $$ =\ell\frac{d}{d\ell} 
\int_0^1\frac{d\lambda}{\lambda}\lambda^\mu  T \psi_{\ell\lambda^{-1}}  =\int_0^1\frac{d\lambda}{\lambda}\lambda^\mu  T (\rho\psi)_{\lambda^{-1}\ell}  $$
$$=\int_0^1\frac{d\lambda}{\lambda}\lambda^\mu  \rho (T \psi)_{\lambda^{-1}\ell} -\int_0^1\frac{d\lambda}{\lambda}\lambda^\mu  (\rho T) \psi_{\lambda^{-1}\ell} $$
$$=\rho T^{\mu\ell}-\Omega T^{\mu,\ell}=(\Omega+\mu)T^{\mu,\ell}-\Omega T^{\mu\ell}=\mu T^{\mu\ell} .$$
We obtain the simple equation $\ell\frac{d}{d\ell}T^{\mu,\ell}=\mu T^{\mu,\ell}  $.
Expanding the l.h.s and the r.h.s. of this equation
in Laurent series
and identifying 
the different terms 
in the Laurent series expansion,
$$\sum_{k=-n}^{+\infty} \ell\frac{d}{d\ell}T^{k,\ell} \mu^k=\sum_{k=-n}^{+\infty}T^{k,\ell} \mu^{k+1}  $$
we deduce a system of linear equations: 
\begin{equation}
\forall k\geqslant -n+1,\ell\frac{d}{d\ell}T^{k,\ell}=T^{k-1,\ell}\text{ and }
\ell\frac{d}{d\ell}T^{-n,\ell}=0. 
\end{equation} 

But since $\ell\frac{d}{d\ell}T^{0,\ell}=\ell\frac{d}{d\ell}R_{Riesz}^{\ell}T$ and from the fact
that $\left(\ell\frac{d}{d\ell}\right)^{n+1}T^{0,\ell}=\left(\ell\frac{d}{d\ell}\right)^{n}T^{-1,\ell}=\left(\ell\frac{d}{d\ell}\right)^{n}\mathfrak{Res}_\rho[T](\ell)=\cdots=\ell\frac{d}{d\ell}T^{-n,\ell}=0 $, we must have $\left(\ell\frac{d}{d\ell}\right)^{n+1}R_{Riesz}^{\ell}T=0$ which implies
$R_{Riesz}^{\ell}T $ scales like a \emph{polynomial}
of $\log\ell$ 
of degree $n$.
\end{proof}

\bibliographystyle{plain}
\bibliography{biblio}

\begin{thebibliography}{10}

\bibitem{CanaWeinstein}
A.C.D.Silva and A.Weinstein.
\newblock {\em Geometric models of noncommutative algebras}.
\newblock AMS, {B}erkeley mathematics lecture notes volume 10 edition, 1999.

\bibitem{Alesker}
S.~Alesker.
\newblock Valuations on manifolds and integral geometry.
\newblock {\em Geom. Funct. Anal.}, 20:1073--1143, 2010.

\bibitem{Eells}
C.B. Allendoerfer and J.Eells.
\newblock On the cohomology of smooth manifolds.
\newblock {\em Commentarii mathematici Helvetici}, pages 165--179, 1957.

\bibitem{AVG}
V.~Arnold, A.N. Varchenko, and S.M. Gusein-Zade.
\newblock {\em Singularities of differentiable maps volume $2$}.
\newblock Springer, 2012.

\bibitem{Fuchs}
B.Feigin and D.B.Fuchs.
\newblock Cohomology of {L}ie groups and {L}ie algebras.
\newblock {\em Encyclopedia of mathematical sciences}, 21:125--211, 2000.

\bibitem{Bogoliubov}
N.~N. Bogoliubov and D.~V. Shirkov.
\newblock {\em Introduction to the Theory of Quantized Fields}.
\newblock Interscience Pub. Inc., New York, 1959.

\bibitem{Borvertex}
R.~Borcherds.
\newblock Quantum vertex algebra.
\newblock 2001.

\bibitem{Borcherds}
R.~Borcherds.
\newblock Renormalization and quantum field theory.
\newblock {\em Algebra and Number Theory}, 5-5:627--658, 2011.

\bibitem{Bott-Tu}
R.~Bott and L.~Tu.
\newblock {\em Differential forms in algebraic topology}.
\newblock Springer, 1982.

\bibitem{BrouderQFT}
C.~Brouder.
\newblock Quantum field theory meets {H}opf algebra.
\newblock {\em Mathematische Nachrichten}, 282,12:1664--1690, 2009.

\bibitem{Bar-Ginoux-Pfaffle}
C.Bar, N.Ginoux, and F.Pfaffle.
\newblock {\em Wave Equations on Lorentzian Manifolds and Quantization}.
\newblock European Mathematical Society, 2007.

\bibitem{BrouderD}
C.Brouder and Y.Dabrowski.
\newblock Functional properties of {H}örmander's space of distributions having
  a specified wavefront set.
\newblock arXiv:1308.1061, 2013.

\bibitem{GW}
C.G\'erard and M.Wrochna.
\newblock Construction of {H}adamard states by pseudo-differential calculus.
\newblock {\em arxiv}, arXiv:1209.2604v3, 2013.

\bibitem{WB}
D.Bahns and M.Wrochna.
\newblock On-shell extension of distributions.
\newblock {\em arXiv:1210.5448}, 2012.

\bibitem{BoutetDuke}
L.Boutet de~Monvel.
\newblock {\em A course in pseudodifferential operators and applications}.
\newblock 1974.

\bibitem{Epstein}
H.~Epstein and V.~Glaser.
\newblock The role of locality in perturbation theory.
\newblock {\em Ann. Inst. Henri Poincar\'e}, 19:211--95, 1973.

\bibitem{EGS}
H.~Epstein, V.~Glaser, and R.~Stora.
\newblock General properties of the $n$-point functions in local quantum field
  theory.
\newblock In R.~Balian and D.~Iagolnitzer, editors, {\em Analyse structurale
  des amplitudes de collision}, Les Houches, pages 5--93, Amsterdam, 1976.
  North Holland.

\bibitem{Friedlander}
F.G.Friedlander.
\newblock {\em The wave equation on a curved space-time.}
\newblock CUP, 1975.

\bibitem{Gar}
P.~R. Garabedian.
\newblock {\em Partial differential equations}.
\newblock AMS, reprint edition, 1998.

\bibitem{GS2}
I.M. Gelfand and G.~Shilov.
\newblock {\em Generalized functions volume $2$}.
\newblock Academic Press, 1968.

\bibitem{Eskin}
G.Eskin.
\newblock {\em Lectures on linear partial differential equations}.
\newblock AMS Graduate Studies, 2011.

\bibitem{Godement}
R.~Godement.
\newblock {\em Analyse math\'ematique $3$}.
\newblock Springer, 2002.

\bibitem{Gros}
M.~Grosser.
\newblock A note on distribution spaces on manifolds.
\newblock {\em Novi Sad J. Math.}, 38:121--128, 2008.

\bibitem{DuistermaatFIO}
H.Duistermaat.
\newblock {\em Fourier integral operators}.
\newblock Birkhauser, 1996.

\bibitem{Duistermaathad}
H.Duistermaat.
\newblock {\em The heat kernel lefschetz fixed point formula for the spin-c
  Dirac operator}.
\newblock Birkhauser Boston Inc, 2011.

\bibitem{DuistermaatKolk}
H.Duistermaat and J.Kolk.
\newblock {\em Distributions.}
\newblock Springer, 2010.

\bibitem{DG}
H.Duistermaat and V.Guillemin.
\newblock The spectrum of positive elliptic operators and periodic
  bicharacteristics.
\newblock {\em Inventiones}, 29:39--79, 1975.

\bibitem{Helein}
F.~H\'elein.
\newblock {\em Harmonic maps, conservation laws and moving frames}.
\newblock CUP, 2nd edition, 2002.

\bibitem{Hirsch}
M.~Hirsch.
\newblock {\em Differential topology}.
\newblock Springer, corrected fifth edition, 1994.

\bibitem{Gelfand}
I.M.Gelfand and G.E.Shilov.
\newblock {\em Generalized functions volume $1$}.
\newblock Academic Press, 1964.

\bibitem{Cardynotes}
J.Cardy.
\newblock http://www-thphys.physics.ox.ac.uk/people/johncardy/lecnotes2.pdf.

\bibitem{Cardy}
J.Cardy.
\newblock {\em Scaling and renormalization in statistical physics}.
\newblock Cambridge University Press, 1996.

\bibitem{Jeanquartier}
P.~Jeanquartier.
\newblock Transformation de {M}ellin et d\'eveloppements asymptotiques.
\newblock {\em L'enseignement math\'ematiques}, 25, 1979.

\bibitem{Hadamard}
J.Hadamard.
\newblock {\em Lectures on Cauchy's problem in linear partial differential
  equation}.
\newblock Yale University Press, 1923.

\bibitem{Joshi}
M.~Joshi.
\newblock Complex powers of the wave operator.
\newblock {\em Portugaliae Mathematica}, 1997.

\bibitem{Junker}
W.~Junker.
\newblock {\em Adiabatic vacua and {H}adamard States for scalar quantum fields
  on curved spacetime}.
\newblock PhD thesis, Universität Hamburg, 1995.

\bibitem{KKK}
M.~Kashiwara, T.~Kawai, and T.~Kimura.
\newblock {\em Foundations of Algebraic Analysis}.
\newblock PUP, 1986.

\bibitem{Costello}
K.Costello.
\newblock {\em Renormalization and effective field theory}.
\newblock AMS Graduate studies, 2011.

\bibitem{EvansZworski}
L.C.Evans and M.~Zworski.
\newblock {\em Semiclassical analysis}.
\newblock AMS, 2011.

\bibitem{Lee}
J.~Lee.
\newblock {\em Manifolds and differential geometry}.
\newblock AMS, 2009.

\bibitem{Hormander}
L.H\"ormander.
\newblock {\em The analysis of linear partial differential operator}.
\newblock Springer-Verlag, 2 edition.

\bibitem{Hormanderwave}
L.H\"ormander.
\newblock {\em Lectures on nonlinear hyperbolic equations}.
\newblock Springer-Verlag, 1997.

\bibitem{Malbour}
B.~Malgrange.
\newblock Divisions des distributions.
\newblock {\em s\'eminaire Bourbaki}, pages 603--615, 1960.

\bibitem{Malgrange}
B.~Malgrange.
\newblock Id\'eaux de fonctions diff\'erentiables et division des
  distributions.
\newblock In {\em Dans le sillage de Laurent Schwartz}, 2003.

\bibitem{Melrose}
R.~B. Melrose.
\newblock A remark on distributions and the de {R}ham theorem.
\newblock arXiv:1105.2597.

\bibitem{Melroseasympt}
R.B. Melrose.
\newblock Calculus of conormal distributions in manifolds with corners.
\newblock {\em IMRN}, 1992.

\bibitem{Meyerprod}
Y.~Meyer.
\newblock Multiplication of distributions.
\newblock {\em Mathematical analysis and applications part B}, pages 603--615,
  1981.

\bibitem{Giaquinta}
M.Giaquinta, G.Modica, and V.Soucek.
\newblock {\em Cartesian currents in the calculus of variations $1$}.
\newblock Springer-Verlag, 1998.

\bibitem{KK}
M.Kashiwara and T.Kawai.
\newblock Second microlocalization and asymptotic expansion.
\newblock In {\em Complex Analysis, Microlocal Calculus and Relativistic
  Quantum Theory Lecture Notes in Physics Volume 126,}, 1980,pp 21-76.

\bibitem{Taylor-Pinsky}
M.Pinsky and M.Taylor.
\newblock Pointwise {F}ourier inversion, a wave equation approach.
\newblock {\em The journal of Fourier analysis and applications}, 3, 1997.

\bibitem{Taylor}
M.Taylor.
\newblock {\em Partial Differential Equations I}.
\newblock Springer, 2 edition, 2011.

\bibitem{TNS}
N.~M. Nikolov, R.~Stora, and I.~Todorov.
\newblock Renormalization of massless {F}eynman amplitudes in configuration
  space.
\newblock {\em arXiv:1307.6854}, 2013.

\bibitem{Griffiths}
P.Griffiths and J.Harris.
\newblock {\em Principles of algebraic geometry}.
\newblock John Wiley Sons, 1978.

\bibitem{Pham}
F.~Pham.
\newblock {\em Singularities of Integrals : Homology, hyperfunctions and
  microlocal Analysis}.
\newblock Springer, 2011.

\bibitem{Popineau}
G.~Popineau and R.~Stora, 1982.
\newblock A pedagogical remark on the main theorem of perturbative
  renormalization theory, (unpublished preprint).

\bibitem{R}
M~Radzikowski.
\newblock Microlocal approach to the {H}adamard condition in quantum field
  theory on curved space-time.
\newblock {\em Comm. Math. Phys.}, 179, 1996.

\bibitem{BF}
R.Brunetti and K.Fredenhagen.
\newblock Microlocal analysis and interacting quantum field theories:
  Renormalization on physical backgrounds.
\newblock {\em Comm. Math. Phys.}, 208:623--661, 2000.

\bibitem{RS}
M.~Reed and B.~Simon.
\newblock {\em Methods of modern mathematical physics volume $2$}.
\newblock Academic Press, 1975.

\bibitem{Rham}
G.~De Rham.
\newblock {\em Differentiable manifolds}.
\newblock Springer, 1984.

\bibitem{HP}
R.Harvey and J.Polking.
\newblock Removable singularities of solutions of linear partial differential
  equations.
\newblock {\em Acta Mathematica}, 125:39--56, 1970.

\bibitem{Riesz}
M.~Riesz.
\newblock L’int\'egrale de {R}iemann-{L}iouville et le probl\`eme de
  {C}auchy.
\newblock {\em Acta Math}, 81:1--223, 1949.

\bibitem{BatesWeinstein}
S.Bates and A.Weinstein.
\newblock {\em Lectures on the geometry of quantization}.
\newblock AMS, {B}erkeley mathematics lecture notes volume $8$ edition, 1998.

\bibitem{Schwartzpart}
L.~Schwartz.
\newblock {\em Application of distributions to the elementary particles in
  Quantum mechanics}.

\bibitem{Schwartz}
L.~Schwartz.
\newblock {\em Th\'eorie des distributions}.
\newblock Herrmann, 1966.

\bibitem{Vladimirov}
A.G. Sergeev and V.~Vladimirov.
\newblock Complex analysis in the future tube.
\newblock {\em Encyclopedia of mathematical sciences, Several complex variables
  $2$}, 8:179.

\bibitem{HW}
S.Hollands and R.M. Wald.
\newblock Existence of local covariant time ordered products of quantum fields
  in curved spacetime.
\newblock {\em Comm. Math. Phys.}, 2002.

\bibitem{Sib}
N.~Sibony.
\newblock Quelques probl\`emes de prolongement de courants en analyse complexe.
\newblock {\em Duke Math. J.}, (1985).

\bibitem{Loja}
S.Lojasiewicz.
\newblock Sur le probl\`eme de la division.
\newblock {\em Studia Matematica}, 18, 1959.

\bibitem{Stein}
E.~M. Stein.
\newblock {\em Singular integrals and differentiability properties of
  functions}.
\newblock Princeton University Press, princeton mathematical series, no. 30
  edition, 1970.

\bibitem{Stein2}
E.~M. Stein.
\newblock {\em Harmonic analysis}.
\newblock Princeton University Press, princeton mathematical series, edition,
  1993.

\bibitem{Stora02}
R.~Stora.
\newblock Pedagogical experiments in renormalized perturbation theory,
  contribution to the conference 'theory of renormalization and
  regularization', {H}esselberg, {G}ermany,
  http://wwwthep.physik.uni-mainz.de/\~scheck/hessbg02.html, 2002.

\bibitem{StueckelbergP}
E.~C.~G. Stueckelberg and A.~Peterman.
\newblock La normalisation des constantes dans la th\'eorie des quanta.
\newblock {\em Helv. Phys. Acta}, 26:499--520, 1953.

\bibitem{Tao}
T.~Tao.
\newblock {\em Nonlinear dispersive equations}.
\newblock AMS, 2006.

\bibitem{SibDinh1}
T.C.Dinh and N.Sibony.
\newblock Introduction to the theory of currents.
\newblock 2005.

\bibitem{CDV}
Y.~Colin~De Verdi\`ere.
\newblock Param\'etrix de l’\'equation des ondes et int\'egrales sur
  l’espace des chemins.
\newblock In {\em S\'eminaire \'Equations aux d\'eriv\'ees partielles
  (Polytechnique)}, volume exp. no 20, pages 1--12, (1974-1975).

\bibitem{Geomasympt}
V.Guillemin and S.Sternberg.
\newblock {\em Geometry asymptotics}.
\newblock AMS, revised edition, 1990.

\bibitem{Waldmann}
S.~Waldmann.
\newblock Geometric wave equations.
\newblock arXiv:1208.4706, 2012.

\bibitem{Weinsteinext}
A.~Weinstein.
\newblock The order and symbol of a distribution.
\newblock {\em Transactions of the American Mathematical Society}, 241:1--54,
  1978.

\bibitem{YN}
S.~Yakovenko and D.~Novikov.
\newblock Lectures on meromorphic flat connections.
\newblock In {\em Normal forms, Bifurcations and Finiteness Problems in
  Differential Equations}.

\bibitem{Yakovenko}
Y.Ilyashenko and S.Yakovenko.
\newblock {\em Lectures on analytic differential equations}.
\newblock AMS, 2008.

\bibitem{Meyer}
Y.Meyer.
\newblock {\em Wavelets, vibration and scalings}.
\newblock AMS, crm monograph series edition, 1998.

\bibitem{Zeidler}
E.~Zeidler.
\newblock {\em Quantum field theory I}.
\newblock Springer, 2006.

\bibitem{Zelditch}
S.~Zelditch.
\newblock Pluri-potential theory on {G}rauert tubes of real analytic riemannian
  manifolds {I}.
\newblock {\em arxiv}, arXiv:1107.0463v1, 2011.

\bibitem{Zoladek}
H.~Zoladek.
\newblock {\em The monodromy group}.
\newblock Birkhauser, 2006.

\end{thebibliography}
\end{document}